%
\documentclass[10pt,twoside]{article}
\usepackage{amsmath,amssymb,latexsym,graphicx,hyperref,epstopdf,color,subfigure}
\usepackage[noend]{algpseudocode}
\usepackage{algorithm,epstopdf}
\usepackage[capitalise]{cleveref}
%
 \setlength{\evensidemargin}{+0.00in}   
 \setlength{\oddsidemargin} {+0.00in}
 \setlength{\textwidth}     {+6.50in}
 \setlength{\topmargin}     {-0.50in}   
 \setlength{\textheight}    {+9.00in}

\normalsize
\bibliographystyle{abbrv}
%
%
%
%

\newtheorem{theorem}{Theorem}
\newtheorem{lemma}{Lemma}
\newtheorem{demma}{Discharging-Lemma}
\newtheorem{corollary}{Corollary}

\newtheorem{proposition}{Proposition}[subsubsection]

\newtheorem{conjecture}{Conjecture}
%
\newenvironment{proof}{{\sc Proof. }}{\hfill$\Box$\vspace{0.2in}}



\pagestyle{myheadings}
\thispagestyle{plain}
%
%
\markboth
{\sc Shu and Lin/v:\today}
{\sc Acyclic edge coloring conjecture on planar graphs/v:\today}
\title{Planar graphs are acyclically edge $(\Delta + 5)$-colorable}
\author{
	Qiaojun~Shu\thanks{Department of Mathematics, Hangzhou Dianzi University.  Hangzhou 310018, China.
	\texttt{qjshu@hdu.edu.cn}}
\and
	Guohui~Lin\thanks{Correspondence author.
	Department of Computing Science, University of Alberta.  Edmonton, Alberta T6G 2E8, Canada.
	\texttt{guohui@ualberta.ca}}
}
\date{\today}
\begin{document}
\maketitle

\begin{abstract}
An edge coloring of a graph $G$ is to color all the edges in the graph such that adjacent edges receive different colors.
It is acyclic if each cycle in the graph receives at least three colors.
Fiam{\v{c}}ik (1978) and Alon, Sudakov and Zaks (2001) conjectured that
every simple graph with maximum degree $\Delta$ is acyclically edge $(\Delta + 2)$-colorable --- 
the well-known {\em acyclic edge coloring conjecture} (AECC).
Despite many major breakthroughs and minor improvements, the conjecture remains open even for planar graphs.
In this paper, we prove that planar graphs are acyclically edge $(\Delta + 5)$-colorable.
Our proof has two main steps:
Using discharging methods, we first show that every non-trivial planar graph must have one of the eight groups of well characterized local structures;
and then acyclically edge color the graph using no more than $\Delta + 5$ colors by an induction on the number of edges.

\paragraph{Keywords:}
Acyclic edge coloring; planar graph; local structure; discharging; induction 
\end{abstract}

\subsubsection*{Acknowledgments.}
This research is supported by the NSERC Canada and the China Scholarship Council.

\section{Introduction}
We consider simple and $2$-connected graphs in this paper,
i.e., graphs without self-loops, without parallel edges but with at least two vertex-disjoint paths between any pair of vertices,
as explained below.

We use $V(G)$ and $E(G)$ to denote the vertex set and the edge set of a graph $G$, respectively, 
and simplify them as $V$ and $E$ respectively when the graph is clear from the context.
This way, $G$ is also denoted as $G = (V, E)$.

For an integer $k \ge 2$, we use the integers $1, 2, \ldots, k$ to represent the $k$ distinct colors.
An {\em edge $k$-coloring} of the graph $G$ is a mapping $c: E(G) \to \{1, 2, \ldots, k\}$ such that any two adjacent edges receive different colors.
If such a mapping exists, then $G$ is said {\em edge $k$-colorable}.
The {\em chromatic index} of $G$, denoted as $\chi'(G)$, is the smallest integer $k$ such that $G$ is edge $k$-colorable.
An edge $k$-coloring of $G$ is called {\em acyclic} if each cycle receives at least three different colors,
or equivalently there is no {\em bichromatic} cycle in $G$, or equivalently the subgraph of $G$ induced by any two colors is a forest.
Correspondingly, $a'(G)$ denotes the smallest integer $k$ such that $G$ is acyclically edge $k$-colorable, called the {\em acyclic chromatic index} of $G$.

One sees from the definition of acyclic edge coloring the reason why we consider simple and connected graphs only.
Furthermore, a connected graph can be represented as a block-cut tree,
in which each block is a $2$-connected component of the graph and two blocks are adjacent if and only if they overlap at a cut-vertex.
Given an acyclic edge $k$-coloring of the graph $G$, after swapping any two colors inside a block it remains as an acyclic edge $k$-coloring.
Therefore, we may constrain ourselves on $2$-connected graphs.
In the sequel, the input graph is simple and $2$-connected.

We use $\Delta(G)$ to denote the maximum degree of the graph $G$, and simply as $\Delta$ when the graph is clear from the context.
It holds trivially that $\Delta \le \chi'(G) \le a'(G)$.
By Vizing's theorem \cite{Vizing64}, $\chi'(G) \le \Delta + 1$;
it follows from $\chi'(K_3) = 3 = \Delta + 1$ that $\Delta + 1$ is tight for the chromatic index $\chi'(G)$.
For the acyclic chromatic index, one sees that $a'(K_4) = 5 = \Delta + 2$.
Fiam{\v{c}}ik \cite{F78} and Alon, Sudakov and Zaks \cite{ABZ01} independently conjectured that it is upper bounded by $\Delta + 2$,
called the {\em acyclic edge coloring conjecture} (AECC):

\begin{conjecture}\label{conj01} {\rm (AECC)}
For any graph $G$, $a'(G) \le \Delta + 2$.
\end{conjecture}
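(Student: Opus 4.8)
The plan is to attack Conjecture~\ref{conj01} by induction on the number of edges $|E(G)|$, combined with a \emph{reducible-configuration} argument in the spirit of the local-structure and discharging framework developed later in this paper for planar graphs. The base cases are graphs with few edges, where $\Delta + 2$ colors trivially suffice. For the inductive step one would first establish a \emph{structural lemma}: every graph $G$ with $\Delta(G) = \Delta$ contains at least one configuration from a finite catalogue of reducible local structures---typically a vertex of small degree together with prescribed information about the degrees and shared colors of its neighbors. Given such a configuration, one deletes a carefully chosen edge (or a small set of edges) to obtain a graph $G'$ with strictly fewer edges; by the inductive hypothesis $G'$ admits an acyclic edge $(\Delta + 2)$-coloring, which one then extends back to $G$.

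The heart of the extension step is to color the deleted edge $uv$ so as to preserve both properness and acyclicity. Properness forbids the colors appearing on the at most $(\deg u - 1) + (\deg v - 1)$ edges incident to $u$ or $v$, and acyclicity additionally forbids any color that would close a bichromatic cycle through $uv$. With only $\Delta + 2$ colors available one cannot simply count free colors when $u$ and $v$ both have large degree, so the argument must exploit the reducibility hypothesis: either the configuration forces a low-degree endpoint, guaranteeing a surplus of colors, or one performs Kempe-type recolorings---swapping the two colors along a maximal bichromatic path to liberate a color at $u$---while verifying that no new bichromatic cycle is created elsewhere. Controlling the interaction among several such swaps is the delicate bookkeeping that any complete proof must manage.

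I expect the decisive obstacle to lie in the \textbf{structural lemma}, not in the extension step. For sparse classes the lemma is provable: planarity yields, via Euler's formula, an average-degree bound that feeds a discharging argument and forces one of finitely many light configurations---exactly the mechanism that makes the $\Delta + 5$ result for planar graphs later in this paper attainable. For \emph{general} graphs no such sparsity is available; a graph can be simultaneously dense and highly connected, so there is no guarantee of a low-degree vertex or of any bounded reducible configuration, and the discharging machinery has nothing to discharge. Consequently the inductive scheme, clean as it is in outline, has no known source of reducible configurations in the general case, and this is precisely where the problem stalls---indeed, as noted in the abstract, the conjecture remains open even for planar graphs.

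For this reason I would not expect the pure induction-and-discharging route to settle the full conjecture, and a realistic plan hedges with the \emph{probabilistic} and \emph{entropy-compression} methods, which dispense with local structure entirely: one colors edges by a randomized process and uses a compression or counting argument to show that a valid acyclic coloring is found with positive probability. These techniques are robust to density and currently yield the strongest general bounds, of the form $a'(G) \le c\Delta$ for a small constant $c$. The open challenge is that their slack is multiplicative, whereas the target $\Delta + 2$ is additive; sharpening $c\Delta$ down to $\Delta + 2$ appears to require a genuinely new idea. In short, the essential obstacle is bridging the gap between the additive guarantee that discharging delivers on sparse graphs and the multiplicative guarantee that probabilistic methods deliver on all graphs.
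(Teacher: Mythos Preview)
Your proposal is not a proof, and you correctly recognize that it cannot be: Conjecture~\ref{conj01} is stated in the paper as an \emph{open conjecture}, not as a theorem to be proved. The paper offers no proof of it; instead it proves the weaker Theorem~\ref{thm01}, that $a'(G)\le \Delta+5$ for planar $G$, and even this weakening requires the full length of the paper. Your write-up is an accurate assessment of why the general conjecture resists the discharging/induction framework (no sparsity to drive a structural lemma) and why probabilistic methods give only multiplicative slack, but it is a research discussion rather than an argument establishing the statement. There is nothing to compare against in the paper because the paper does not attempt a proof of this item.
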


The AECC has been confirmed true for graphs with $\Delta \in \{3, 4\}$~\cite{Sku04,AMM12,BC09,SWMW19,WMSW19}.
However, for an arbitrary graph $G$, the conjecture is far away from reached.
Yet, several major milestones bounding $a'(G)$ have been achieved:
Alon, McDiarmid and Reed \cite{AMR91} proved that $a'(G) \le 64 \Delta$ by a probabilistic argument;
and then the upper bound was improved to $16 \Delta$ \cite{MR98}, to $\lceil 9.62 (\Delta - 1) \rceil$ \cite{NPS12}, to $4 \Delta - 4$ \cite{EP12},
and most recently in 2017 to $\lceil 3.74 (\Delta - 1) \rceil + 1$ by Giotis et al.~\cite{GKP17} using Lov\'{a}sz local lemma.

A {\em planar} graph is one that can be drawn in the two-dimensional plane so that its edges intersect only at their ending vertices.
The AECC on planar graphs also attracts many studies with much tighter upper bounds.
For an arbitrary planar graph $G$, Basavaraju et al.~\cite{BLSNHT11} first showed that $a'(G) \le \Delta + 12$;
and then the upper bound was improved to $\Delta + 7$ by Wang, Shu and Wang~\cite{WSW127} and the latest to $\Delta + 6$ by Wang and Zhang~\cite{WZ16}.
For some sub-classes of planar graphs, the AECC has been confirmed true,
for example, planar graphs without $i$-cycles for each $i \in \{3, 4, 5, 6\}$ \cite{SW11,WSW4,SWW12,WSWZ14},
and planar graphs without intersecting triangles (i.e., $3$-cycles) \cite{SCH20,SCH21}.

In this paper, we focus on arbitrary planar graphs, aiming to improve the upper bound from $\Delta + 6$ to $\Delta + 5$, stated in the following theorem.
\begin{theorem}
\label{thm01}
If $G$ is a planar graph, then $a'(G)\leq \Delta + 5$.
\end{theorem}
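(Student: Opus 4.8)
The plan is to argue by contradiction via a \emph{minimal counterexample}, combining an unavoidable-set structural lemma (proved by discharging) with a reducibility argument (carried out by induction on the number of edges). Suppose the theorem fails, and let $G = (V, E)$ be a $2$-connected planar graph with $a'(G) > \Delta + 5$ having the fewest edges; fix a plane embedding and write $\Delta = \Delta(G)$. By minimality, every planar graph with fewer edges than $G$ --- in particular every graph obtained from $G$ by deleting an edge or by suppressing a low-degree vertex --- admits an acyclic edge coloring with $\Delta + 5$ colors, since deleting edges cannot increase the maximum degree. The whole proof then reduces to exhibiting a finite list of local configurations such that (i) at least one of them must occur in $G$, and (ii) each can be \emph{reduced}, i.e., a coloring of a smaller graph can always be extended to $G$, contradicting the choice of $G$.

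For the first part I would set up a discharging scheme. Assign to each vertex $v$ the initial charge $d(v) - 4$ and to each face $f$ the charge $d(f) - 4$; by Euler's formula $V - E + F = 2$ together with $\sum_v d(v) = \sum_f d(f) = 2E$, the total charge equals $-8 < 0$. Degree-$2$ and degree-$3$ vertices and triangular ($3$-)faces are the only sources of negative charge, so the discharging rules must move charge from large-degree vertices and large faces toward these deficient objects. I would design rules --- for instance, each $k$-vertex with $k \ge 5$ donating fixed amounts across its incident faces to neighbouring $2$- and $3$-vertices, and each large face donating to incident triangles and small vertices --- so that, assuming $G$ avoids every configuration on the list, all vertices and faces end with nonnegative charge. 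This contradicts the negative total and proves the structural lemma: every non-trivial planar graph contains one of the eight groups of configurations. The art here is to calibrate the rules and the configuration list so that exactly the irreducible-looking neighbourhoods are forced to appear.

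For the reducibility part, each configuration is handled by deleting a carefully chosen edge (or a degree-$2$ vertex together with its two incident edges), colouring the resulting smaller graph by the induction hypothesis with $\Delta + 5$ colours, and then colouring the deleted edge(s). To colour an edge $uv$ one needs a colour outside $C(u) \cup C(v)$ (the colours already present at $u$ and $v$) that also creates no bichromatic cycle; such a cycle through $uv$ coloured $\alpha$ corresponds exactly to a maximal $(\alpha,\beta)$-alternating path joining $u$ and $v$ in the current colouring. A counting argument shows that with $\Delta + 5$ colours the number excluded by the proper-colouring constraint plus the number excluded by the acyclicity constraint stays strictly below $\Delta + 5$ whenever the degrees in the configuration are small enough, so a valid colour survives; when the naive count is tight I would recolour along a suitable Kempe $(\alpha,\beta)$-chain to free a colour or to break the offending bichromatic path before colouring $uv$.

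The main obstacle, as in all results of this type, is the reducibility step rather than the discharging. Making the structural list small enough for the discharging to close forces some configurations to be dense (e.g.\ triangles incident to several $2$- or $3$-vertices), and for those the available-colour count is tight. The delicate part is organizing the simultaneous constraints coming from the two endpoints of the uncoloured edge and from all the $(\alpha,\beta)$-paths that might close into bichromatic cycles, and choosing the colouring order and the Kempe recolourings so that every case goes through with only $\Delta + 5$ colours. I therefore expect the bulk of the proof to be an exhaustive but structured case analysis of these extensions, which is precisely where the improvement from $\Delta + 6$ to $\Delta + 5$ has to be squeezed out.
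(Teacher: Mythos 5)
Your proposal follows essentially the same route as the paper: a discharging argument with charges $\pm(d(\cdot)-4)$ on vertices and faces (your sign convention gives total $-8$ where the paper's $4-d(\cdot)$ gives $+8$, which is immaterial) to force one of a finite list of local configurations, followed by an induction on $|E(G)|$ in which the deleted edge is recolored by counting forbidden colors and breaking bichromatic $(\alpha,\beta)$-paths via recoloring. The only organizational difference is that the paper first deletes all $2$-vertices and discharges on the resulting graph $H$ with $\delta(H)\ge 3$ rather than feeding charge to $2$-vertices directly, but this is a variant within the same framework, and your outline correctly identifies that the substance of the proof lies in the exhaustive reducibility case analysis.
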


Recall that when $\Delta \le 4$, the AECC has been confirmed true.
We thus prove Theorem~\ref{thm01} assuming $\Delta \ge 5$.
The rest of the paper is organized as follows.
In Section 2, we characterize eight groups of local structures (also called {\em configurations}),
and by discharging methods we prove that $G$ must contain at least one of these local structures.
We prove the theorem in Section 3 by an induction on the number of edges in $G$,
where we employ an important known property of an acyclic edge coloring (Lemma~\ref{lemma06}) from \cite{SWMW19}.
We conclude the paper in Section 4.

\section{Notations and the eight groups of local structures}
We fix a simple and $2$-connected planar graph $G$ with $\Delta\ge 5$ for discussion.

Let $d(v)$ denote the degree of the vertex $v$ in $G$.
A vertex of degree $k$ (at least $k$, at most $k$, respectively) is called a $k$-{\em vertex} ({$k^+$-{\em vertex}, $k^-$-{\em vertex}, respectively).
For convenience, let $n_k(v)$ ($n_{k^{+}}(v)$, $n_{k^{-}}(v)$, respectively) denote the number of $k$-vertices ($k^{+}$-vertices, $k^{-}$-vertices, respectively) adjacent to $v$ in $G$.

We use $[a, b]$ to denote the set of integers in between $a$ and $b$, inclusive.
Let $S_i$ be a set of integers, $i = 1, 2$;
then an {\em $(S_1, S_2)$ vertex} is a $k$-vertex $x$ of $G$ such that $k \in S_1$ and $n_{6^{+}}(x) \in S_2$ (and in this case, we say that $x$ is $(S_1, S_2)$).
When $S_1 = \{k\}$ ($S_2 = \{j\}$, $S_1 = \{k_1, k_2\}$, $S_2 = \{j_1, j_2\}$, respectively), the above notation is simplified as $(k, S_2)$ ($(S_1, j)$, $(k_1/k_2, S_2)$, $(S_1, j_1/j_2)$, respectively).
Similarly, when $S_1 = [k, \infty]$ ($S_1 = [1, k]$, $S_2 = [j, \infty]$, $S_2 = [1, j]$, respectively),
we can define a $(k^+, S_2)$ ($(k^-, S_2)$, $(S_1, j^+)$, $(S_1, j^-)$, respectively) vertex.
For instance, $(9, 5)$ is a degree-$9$ vertex $x$ having exactly five $6^+$-neighbors,
and $(7^+, 4^-)$ is a $7^+$-vertex $x$ with no more than four $6^+$-neighbors.

\begin{figure}[]
\begin{center}
\includegraphics[height=0.9\textheight]{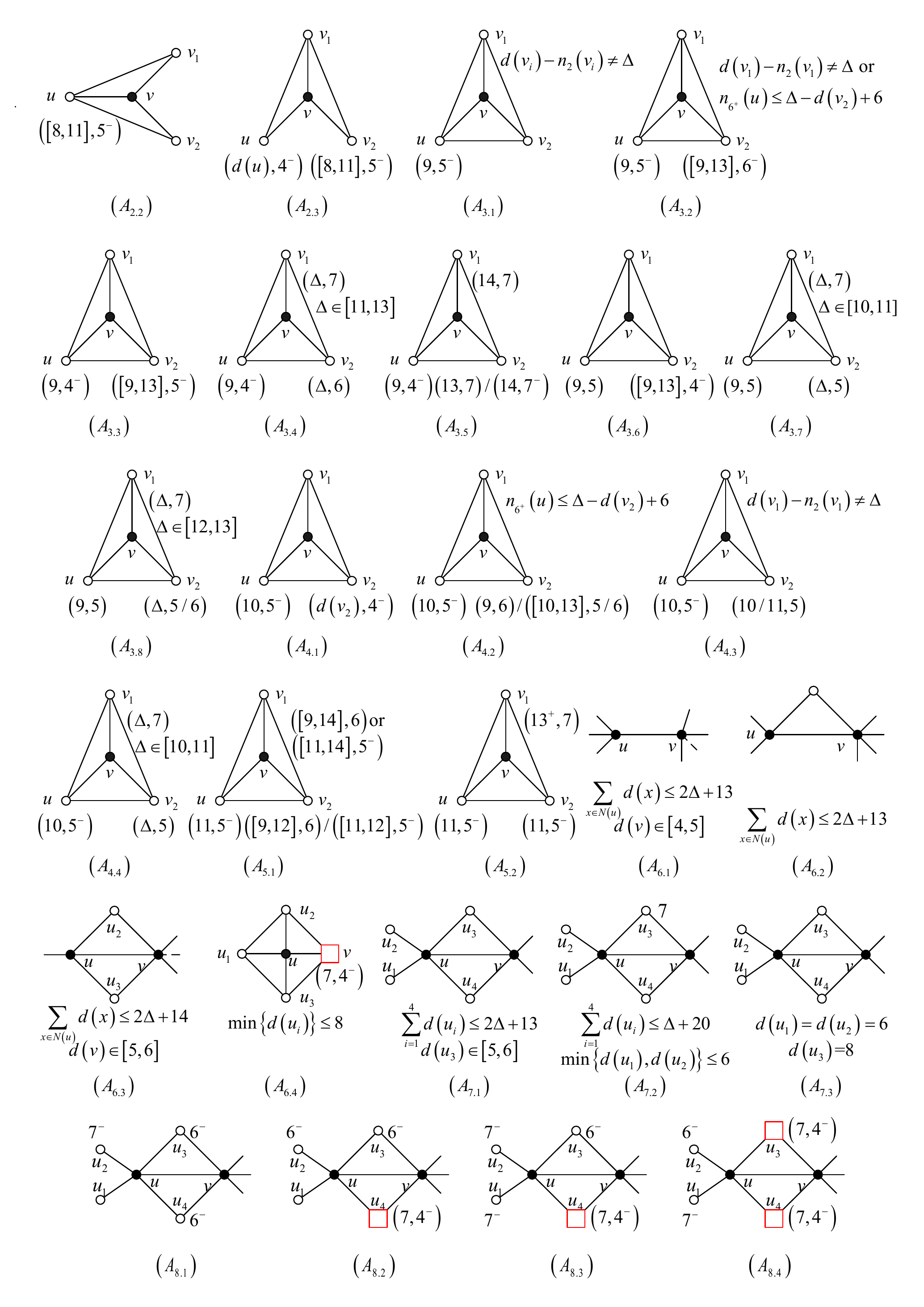}
\caption{Some local structures of the groups ($A_2$)--($A_8$) as defined in Theorem~\ref{thm02}.
	In each local structure, filled vertices are distinct from each other and each has its degree determined,
	while a non-filled vertex has an undetermined degree (i.e., could be incident with more edges not shown).
	In ($A_{6.1}$) and ($A_{6.3}$), a dashed edge indicates the existence of a possible edge incident at the vertex.\label{fig01}}
\end{center}
\end{figure}

\begin{theorem}
\label{thm02}
The graph $G$ contains at least one of the following local structures {\rm ($A_1$)--($A_8$)}, most of which are depicted in Figure~\ref{fig01}.
\begin{description}
\parskip=0pt
\item[{\rm ($A_1$)}]
	A path $u v w$ with $d(v)= 2$ such that at least one of the following holds:
	\begin{description}
	\parskip=0pt
	\item[{\rm ($A_{1.1}$)}]
		$u w\not\in E(G)$;
	\item[{\rm ($A_{1.2}$)}]
		$u w \in E(G)$ and $d(u)\le 7$;
	\item[{\rm ($A_{1.3}$)}]
		$u w \in E(G)$ and $n_{5^-}(u)\ge d(u) - 6$.
	\end{description}
\item[{\rm ($A_2$)}]
	An edge $u v$ with $N(v) = \{u, v_1, v_2\}$ such that at least one of the following holds:
	\begin{description}
	\parskip=0pt
	\item[{\rm ($A_{2.1}$)}]
		$d(u)\le 7$;
	\item[{\rm ($A_{2.2}$)}]
		$u$ is $([8, 11], 5^-)$, and $u v_1, u v_2\in E(G)$, $v_1 v_2\not\in E(G)$;
	\item[{\rm ($A_{2.3}$)}]
		$u$ is $(8, 4^-)$, $v_2$ is $([8, 11], 5^-)$, and $u v_1, v_1 v_2\in E(G)$, $u v_2\not\in E(G)$.
	\item[{\rm ($A_{2.4}$)}]
		$d(u) = 8$, and $u v_1, u v_2, v_1 v_2\in E(G)$.
	\end{description}
\item[{\rm ($A_3$)}]
	A $(9, 5^-)$ vertex $u$ adjacent to a $3$-vertex $v$ with $N(v) = \{u, v_1, v_2\}$ and $u v_1, u v_2, v_1 v_2\in E(G)$ such that at least one of the following holds:
	\begin{description}
	\parskip=0pt
	\item[{\rm ($A_{3.1}$)}]
		$d(v_i)- n_2(v_i)\ne \Delta$ for each $i\in \{1, 2\}$;
	\item[{\rm ($A_{3.2}$)}]
		$v_2$ is $([9, 13], 6^-)$, and $d(v_1)- n_2(v_1)\ne \Delta$ or $n_{6^+}(v_1)\le \Delta - d(v_2) + 6$;
	\item[{\rm ($A_{3.3}$)}]
		$u$ is $(9, 4^-)$ and $v_2$ is $([9, 13], 5^-)$;
	\item[{\rm ($A_{3.4}$)}]
		$u$ is $(9, 4^-)$, $v_1$ is $(\Delta, 7)$, and $v_2$ is $(\Delta, 6)$, where $\Delta \in [11, 13]$;
	\item[{\rm ($A_{3.5}$)}]
		$u$ is $(9, 4^-)$, $v_1$ is $(14, 7)$, and $v_2$ is $(13, 7)$ or $(14, 7^-)$;
    \item[{\rm ($A_{3.6}$)}]
		$u$ is $(9, 5)$ and $v_2$ is $([9, 13], 4^-)$;
	\item[{\rm ($A_{3.7}$)}]
		$u$ is $(9, 5)$, $v_1$ is $(\Delta, 7)$, and $v_2$ is $(\Delta, 5)$, where $\Delta \in [10, 11]$;
	\item[{\rm ($A_{3.8}$)}]
		$u$ is $(9, 5)$, $v_1$ is $(\Delta, 7)$, and $v_2$ is $(\Delta, 5/6)$, where $\Delta \in [12, 13]$.
     \end{description}
\item[{\rm ($A_4$)}]
	A $(10, 5^-)$ vertex $u$ adjacent to a $3$-vertex $v$ with $N(v)= \{u, v_1, v_2\}$ and $uv_1, uv_2, v_1v_2\in E(G)$ such that at least one of the following holds:
    \begin{description}
	\parskip=0pt
	\item[{\rm ($A_{4.1}$)}]
		$v_2$ is $(d(v_2), 4^-)$;
	\item[{\rm ($A_{4.2}$)}]
		$v_2$ is $(9, 6)$ or $([10, 13], 5/6)$, and $n_{6^+}(v_1)\le \Delta - d(v_2) + 6$;
	\item[{\rm ($A_{4.3}$)}]
		$v_2$ is $(10/11, 5)$ and $d(v_1)- n_2(v_1)\ne \Delta$;
	\item[{\rm ($A_{4.4}$)}]
		$v_1$ is $(\Delta, 7)$, and $v_2$ is $(\Delta, 5)$, where $\Delta \in [10, 11]$.
     \end{description}
\item[{\rm ($A_5$)}]
	A $(11, 5^-)$ vertex $u$ adjacent to a $3$-vertex $v$ with $N(v) = \{u, v_1, v_2\}$ and $u v_1, u v_2, v_1 v_2\in E(G)$ such that at least one of the following holds:
    \begin{description}
	\parskip=0pt
	\item[{\rm ($A_{5.1}$)}]
		$v_1$ is $([9, 14], 6)$ or $([11, 14], 5^-)$, and $v_2$ is $([9, 12], 6)$ or $([11, 12], 5^-)$;
	\item[{\rm ($A_{5.2}$)}]
		$v_1$ is $(13^+, 7)$ and $v_2$ is $(11, 5^-)$.
     \end{description}
\item[{\rm ($A_6$)}]
	A $4$-vertex $u$ with $N(u) = \{v, u_1, u_2, u_3\}$ such that at least one of the following cases holds:
    \begin{description}
	\parskip=0pt
	\item[{\rm ($A_{6.1}$)}]
		$d(v)\in \{4, 5\}$ and $\sum_{x\in N(u)}d(x)\le 2\Delta + 13$;
	\item[{\rm ($A_{6.2}$)}]
		$d(v) = 6$, $uv$ is contained in a $3$-cycle, and $\sum_{x\in N(u)}d(x)\le 2\Delta + 13$;
	\item[{\rm ($A_{6.3}$)}]
		$d(v)\in \{5, 6\}$, $uv$ is contained in two $3$-cycles, and $\sum_{x\in N(u)}d(x)\le 2\Delta + 14$;
	\item[{\rm ($A_{6.4}$)}]
		$v$ is $(7, 4^-)$, $u_1 u_2$, $u_1 u_3$, $v u_2$, $v u_3\in E(G)$, and $\min \{d(u_1), d(u_2), d(u_3)\}\le 8$.
    \end{description}
\item[{\rm ($A_7$)}]
	A $5$-vertex $u$ adjacent to a $5$-vertex $v$ with $N(u) = \{v, u_1, u_2, u_3, u_4\}$ and $uv$ contained in two $3$-cycles $u v u_3$,
    $uvu_4$ such that at least one of the following holds:
   \begin{description}
	\parskip=0pt
	\item[{\rm ($A_{7.1}$)}]
		$d(u_3)\in \{5, 6\}$ and $\sum_{x\in N(u)\setminus \{v\}}d(x)\le 2\Delta + 13$;
	\item[{\rm ($A_{7.2}$)}]
		$d(u_3) = 7$, $\min \{d(u_1), d(u_2)\}\le 6$, and $\sum_{x\in N(u)\setminus \{v\}}d(x)\le \Delta + 20$;
	\item[{\rm ($A_{7.3}$)}]
		$d(u_3) = 8$ and $d(u_1) = d(u_2) = 6$.
     \end{description}
\item[{\rm ($A_8$)}]
	A $5$-vertex $u$ adjacent to a $6$-vertex $v$ with $N(u) = \{v, u_1, u_2, u_3, u_4\}$ and $u v$ contained in two $3$-cycles $u v u_3$,
    $u v u_4$ such that at least one of the following holds:
   \begin{description}
	\parskip=0pt
	\item[{\rm ($A_{8.1}$)}]
		$d(u_3)\le 6$, $d(u_4)\le 6$, and $\min\{d(u_1), d(u_2)\}\le 7$;
	\item[{\rm ($A_{8.2}$)}]
		$d(u_3)\le 6$, $u_4$ is $(7, 4^-)$, and $\min\{d(u_1), d(u_2)\}\le 6$;
	\item[{\rm ($A_{8.3}$)}]
		$d(u_3)\le 6$, $u_4$ is $(7, 4^-)$, and $d(u_i)\le 7$ for each $i\in \{1, 2\}$;
	\item[{\rm ($A_{8.4}$)}]
		$u_i$ is $(7, 4^-)$ for each $i\in \{3, 4\}$ and $d(u_1)\le 6$, $d(u_2)\le 7$.
     \end{description}
\end{description}
\end{theorem}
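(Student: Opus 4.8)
The natural route is the discharging method, and the author's own summary confirms it, so the plan is to argue by contradiction. Assume $G$ contains none of the configurations $(A_1)$--$(A_8)$, fix a planar embedding, and aim to contradict Euler's formula $|V| - |E| + |F| = 2$. Concretely, I would assign to each vertex $v$ an initial charge $\mu(v) = d(v) - 6$ and to each face $f$ an initial charge $\mu(f) = 2 d(f) - 6$; since $\sum_v d(v) = \sum_f d(f) = 2|E|$, the total charge equals $6|E| - 6|V| - 6|F| = -6(|V| - |E| + |F|) = -12 < 0$. The observation that motivates this choice is that the threshold ``$6^+$'' appearing throughout the statement is exactly the set of vertices with non-negative initial charge: a $6^+$-vertex is a potential donor, every $5^-$-vertex is a debtor needing to be topped up by $6 - d(v)$ units, and (with the convention above) $3$-faces are neutral while $4^+$-faces are donors. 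This also explains why the fixed degree bounds stop around $14$: past that point a donor's surplus $d(v) - 6$ is so large that no shortfall can occur, so the case analysis naturally truncates.

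Next I would translate each hypothesis ``$G$ contains no $(A_i)$'' into forbidden-neighborhood facts for the low-degree vertices, organized by debtor degree. The absence of $(A_1)$ forces every $2$-vertex onto a triangle whose large endpoint is an $8^+$-vertex with $n_{5^-}$ bounded away from $d(u)-6$; the absence of $(A_2)$--$(A_5)$ pins down, for a $3$-vertex on a triangle, the degrees of its two triangle-mates $v_1,v_2$ together with their $6^+$-neighbor counts $n_{6^+}(v_i)$; and the absence of $(A_6)$--$(A_8)$ constrains the neighbor-degree sums $\sum_{x \in N(u)} d(x)$ of $4$- and $5$-vertices and the number of triangles through their incident edges. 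These are precisely the facts that guarantee each debtor has enough high-degree neighbors and enough large incident faces to be reimbursed.

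I would then design discharging rules that move charge from $6^+$-vertices and from $4^+$-faces toward the incident/adjacent $5^-$-vertices, with the transfer amounts graded by the recipient's degree and by whether the connecting edge lies in one or two triangles --- which is exactly why the statement repeatedly distinguishes ``$uv$ contained in two $3$-cycles.'' Since the rules only redistribute charge, the total stays $-12$. The verification then proceeds element by element: a large face keeps non-negative charge after donating to its incident debtors; a $6^+$-vertex keeps its surplus $d(v)-6$ because the forbidden-neighborhood facts cap how many debtors can simultaneously draw on it; and each $5^-$-vertex collects at least $6 - d(v)$ from its guaranteed donors. Non-negativity everywhere gives total charge $\ge 0$, contradicting $-12$, and the contradiction proves the theorem.

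The hard part will be the final two steps taken together: calibrating the rules so they are at once tight enough that every $5^-$-vertex is fully reimbursed and loose enough that no $6^+$-vertex or $4^+$-face is over-drained, and then carrying out the exhaustive verification. Because the configurations are so finely stratified --- eight groups with dozens of subcases keyed to exact degrees in $[8,14]$, to $6^+$-neighbor counts, and to triangle membership --- essentially all of the difficulty is bookkeeping, with each subcase $(A_{i.j})$ corresponding to closing off one specific potential shortfall in the charge of a particular debtor configuration. I expect the trickiest balance to be at the $4$- and $5$-vertices of $(A_6)$--$(A_8)$, where the admissible charge depends delicately on both the neighbor-degree sum and the triangle structure, so that the discharging rule there must be stated with several regimes to avoid a deficit.
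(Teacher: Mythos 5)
Your proposal correctly identifies the overall strategy (discharging against Euler's formula, assuming all of $(A_1)$--$(A_8)$ absent), but it stops exactly where the proof of this theorem begins, and the part you defer as ``bookkeeping'' is in fact the entire mathematical content. Concretely: (1) You never state a single discharging rule. The difficulty here is not verifying a given rule set but designing one; the paper's proof hinges on a carefully tuned per-vertex parameter $\alpha_u$ (a function of both $d_H(u)$ and $n'_{5^-}(u)$, Eq.~(\ref{eq02})), five families of rules (R1)--(R5) whose transfer amounts depend on $\alpha$, and ten verification lemmas, several of which do not check elements individually but bound sums of final charges over clusters such as $\{v\}\cup F_3(v)$ or $\{u,v\}\cup F_3(u)\cup F_3(v)$. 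Nothing in your sketch produces or even constrains these choices, so the claim that non-negativity ``then'' follows is unsupported. (2) Your normalization $\mu(f)=2d(f)-6$ makes $3$-faces neutral, whereas the configurations $(A_1)$--$(A_8)$ are evidently engineered for a scheme in which $3$-faces carry positive charge and mediate essentially all transfers (the paper uses $\omega(f)=4-d(f)$, so a $3$-face is a donor, and the fine distinctions in the statement --- e.g.\ the threshold $2\Delta+13$ in $(A_{6.1})$ versus $2\Delta+14$ in $(A_{6.3})$ --- track precisely how many $3$-faces sit on the relevant edges). With neutral $3$-faces you would have to re-route all of that charge through vertices and $4^+$-faces, and there is no argument that the resulting deficits are closed by exactly these configurations; note also that the paper's flow is largely the reverse of yours (low-degree vertices and $3$-faces send charge \emph{to} $6^+$-vertices, and one then shows every final charge is $\le 0$ against a positive total of $8$). (3) You keep the $2$-vertices in the graph, while the paper first deletes them to form $H$ and proves (via the absence of $(A_1)$ and Lemma~\ref{lemma01}) that degrees drop by at most $d(u)-7$; working in $G$ directly changes the face structure and forces you to reimburse charge $-4$ debtors, which your sketch does not address. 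None of these is an unfixable error in principle, but as written the proposal is a description of the method rather than a proof, and the specific design decisions it does commit to diverge from the ones that are known to make these particular configurations unavoidable.
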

The rest of the section is devoted to the proof of Theorem~\ref{thm02}, by contradiction built on discharging methods.
To this purpose, we assume to the contrary that $G$ contains none of the local structures ($A_1$)--($A_8$).

\subsection{Definitions and notations}
Since $G$ is $2$-connected, $d(v)\ge 2$ for every vertex $v \in V(G)$.
Let $G'$ be the graph obtained by deleting all the $2$-vertices of $G$, and let $H$ be a connected component of $G'$.
One sees that every vertex $v \in V(H)$ has its (called {\em original}) degree $d(v) \ge 3$ in $G$, and $H$ is clearly also planar.
In what follows, we assume that $H$ is embedded in the two dimensional plane such that its edges intersect only at their ending vertices,
and we refer $H$ to as a {\em plane} graph.

For objects in the plane graph $H$, we use the following notations:
\begin{itemize}
\parskip=0pt
\item
	$N_H(v) = \{u \mid u v\in E(H)\}$ and $d_H(v) = |N_H(v)|$ --- the degree of the vertex $v\in V(H)$;
\item
	similarly define what a $k$-{\em vertex}, a $k^+$-{\em vertex}, and a $k^-$-{\em vertex} are;
\item
	$n'_k(v)$ ($n'_{k^+}(v)$, $n'_{k^-}(v)$, respectively) --- the number of $k$- ($k^+$-, $k^{-}$-, respectively) neighbors of $v$;
\item
	an $(S_1, S_2)_H$ vertex --- a vertex $x$ with $d_H(x) \in S_1$ and $n'_{6^{+}}(x)\in S_2$;
\item
	an $(S_1, S_2)_{HG}$ vertex --- a vertex $x$ with $d_H(x) = d(x) \in S_1$ and $n'_{6^{+}}(x) = n_{6^{+}}(x) \in S_2$.
\end{itemize}

Note that we do not discuss the embedding of the graph $G$.
Every face in the proof of Theorem~\ref{thm02} is in the embedding of the graph $H$, or is simplified as in the embedding of the graph $H$.
We reserve the character $f$ to denote a face in the embedding.

\begin{itemize}
\parskip=0pt
\item
	$F(H)$ --- the face set of $H$;
\item
	$V_k(H)$ --- the set of $k$-vertices in $V(H)$;
\item
	$V(f)$ --- the set of vertices on (the boundary of) the face $f$;
\item
	a vertex $v$ is {\em incident with} a face $f$ if $v\in V(f)$;
\item
	an edge $u v$ is {\em incident with} a face $f$ if $u, v\in V(f)$;
\item
	$n_k(f)$ ($n_{k^+}(f)$, $n_{k^-}(f)$, respectively) --- the number of $k$- ($k^+$-, $k^{-}$-, respectively) vertices in $V(f)$;
\item
	$\delta(f)$ --- the minimum degree of the vertices in $V(f)$;
\item
	$d(f)$ --- the degree of the face $f$, which is the number of edges on the face $f$ with cut-edges counted twice,
	and similarly define what a $k$-{\em face}, a $k^+$-{\em face} and a $k^-$-{\em face} are;
\item
	$F(v) = \{f \in F(H) \mid v\in V(f)\}$ --- the set of faces incident with $v$;
\item
	$F_3(v) = \{f \in F(H) \mid v\in V(f), d(f)= 3\}$ --- the set of $3$-faces incident with $v$;
\item
	$F(u v) = \{f \in F(H) \mid uv\in E(G), u, v\in V(f)\}$ --- the set of faces incident with the edge $uv$;
	note that $|F(u v)| = 2$ for any edge $uv$;
\item
	$f_{u v} = F(u v)\setminus \{f\}$ --- the face incident with the edge $uv$ and adjacent to the face $f\in F(u v)$;
\item
	$m_k(v)$ ($m_{k^+}(v)$, $m_{k^-}(v)$, respectively) --- the number of $k$- ($k^+$-, $k^-$-, respectively) faces in $F(v)$;
\item
	$m_3(u v)$--- the number of $3$-faces in $F(u v)$.
\end{itemize}

\subsection{Structural properties}
Note from the construction of $H$ that $d_H(u) = d(u)- n_2(u)$ and thus $d_H(u) = d(u)$ if $n_2(u) = 0$.
In this subsection, we characterize structural properties for $5^-$-vertices in $H$, which are used in the discharging.

\begin{lemma}
\label{lemma01}
For any $u \in V(H)$,
\begin{description}
\parskip=0pt
\item[{\rm (1)}]
	if $3 \le d(u) \le 7$, then $d_H(u) = d(u)$;
\item[{\rm (2)}]
	if $d(u) \ge 8$, then $d_H(u) \ge 7$;
\item[{\rm (3)}]
	$\delta(H)\ge 3$, $n'_2(u) = 0$, $n'_{k}(u) = n_{k}(u)$, $k\in [3, 6]$, and $n'_{5^-}(u) = n_{5^-}(u) - n_2(u)$, $n'_{6^+}(u) = n_{6^+}(u)$;
\item[{\rm (4)}]
	if $d(u) \ge 8$, then $n'_{5^-}(u)\le d_H(u) - 7$ and $n'_{6^+}(u)\ge 7$.
\end{description}
\end{lemma}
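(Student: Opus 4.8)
The entire lemma is driven by the single identity $d_H(u) = d(u) - n_2(u)$ recorded just above, together with the absence of configuration $(A_1)$. The plan is first to control $n_2(u)$, the number of $2$-neighbors that disappear when passing from $G$ to $H$, and then to observe that the threshold between ``degree $\le 7$'' and ``degree $\ge 8$'' is preserved under the deletion, so that degree classes transfer cleanly between $G$ and $H$. Throughout I use the trivial remark that a $2$-vertex is in particular a $5^-$-vertex, so that $n_2(u) \le n_{5^-}(u)$.

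For (1) and (2) I fix a $2$-neighbor $v$ of $u$ and let $w$ be its second neighbor, so $uvw$ is a path with $d(v)=2$. Since $G$ avoids $(A_{1.1})$, we may assume $uw \in E(G)$ (otherwise $(A_{1.1})$ already applies), and then $u,v,w$ span a triangle. If $3 \le d(u) \le 7$, this makes $(A_{1.2})$ hold, a contradiction; hence $n_2(u)=0$ and $d_H(u)=d(u)$, which is (1). If instead $d(u) \ge 8$, then the absence of $(A_{1.3})$ forces $n_{5^-}(u) < d(u)-6$, i.e. $n_{5^-}(u) \le d(u)-7$; combined with $n_2(u) \le n_{5^-}(u)$ this gives $n_2(u) \le d(u)-7$ and therefore $d_H(u)=d(u)-n_2(u) \ge 7$. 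When $u$ has no $2$-neighbor both statements are immediate, so (2) follows.

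Part (3) is bookkeeping built on (1) and (2). No $2$-vertex of $G$ survives in $H$, and no surviving vertex is pushed below degree $3$ (a $7^-$-vertex keeps its degree by (1), an $8^+$-vertex stays $\ge 7$ by (2)), so $\delta(H)\ge 3$ and $n'_2(u)=0$. The same two facts show that a $G$-vertex of degree $k \le 7$ has $H$-degree exactly $k$, while a $G$-vertex of degree $\ge 8$ has $H$-degree $\ge 7$; hence the partition of $N_H(u)$ into $5^-$- and $6^+$-vertices coincides with the corresponding partition of the surviving $G$-neighbors of $u$. This yields $n'_k(u)=n_k(u)$ for $k\in[3,6]$ and $n'_{6^+}(u)=n_{6^+}(u)$, and after discarding the $n_2(u)$ deleted $2$-neighbors it yields $n'_{5^-}(u)=n_{5^-}(u)-n_2(u)$.

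Finally, for (4) observe that because $\delta(H)\ge 3$ every $H$-neighbor of $u$ is either $5^-$ or $6^+$, so $n'_{5^-}(u)+n'_{6^+}(u)=d_H(u)$ and the two assertions are equivalent; using (3), each is in turn equivalent to the single inequality $n_{5^-}(u) \le d(u)-7$. When $u$ has a $2$-neighbor this is exactly the bound already extracted from $(A_{1.1})$ and $(A_{1.3})$ in the proof of (2), so $n'_{5^-}(u)=n_{5^-}(u)-n_2(u) \le (d(u)-7)-n_2(u)=d_H(u)-7$ and hence $n'_{6^+}(u)\ge 7$. The main obstacle is the remaining case, in which $u$ has no $2$-neighbor: here $(A_1)$ is silent, and one must instead forbid an $8^+$-vertex from carrying $d(u)-6$ or more low-degree neighbors directly. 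This is where the configurations governing low-degree neighbors of high-degree vertices enter --- the $3$-vertex configurations $(A_2)$--$(A_5)$ (using in particular that, by $(A_{2.1})$, every $3$-vertex has all of its neighbors of degree $\ge 8$), and, for $4$- and $5$-vertex neighbors, $(A_6)$--$(A_8)$. Assembling these into the clean bound $n_{6^+}(u)\ge 7$ is the delicate step; by contrast, parts (1)--(3) are essentially immediate once the identity $d_H(u)=d(u)-n_2(u)$ and the forbidden path $(A_1)$ are in hand.
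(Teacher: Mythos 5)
Parts (1)--(3), and your derivation of (2) from $n_2(u)\le n_{5^-}(u)\le d(u)-7$, are correct and follow the same route as the paper: everything is read off from the identity $d_H(u)=d(u)-n_2(u)$ together with the three clauses of $(A_1)$.

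Part (4), however, is not proved. You correctly reduce both assertions to the single bound $n_{5^-}(u)\le d(u)-7$ and obtain it from the absence of $(A_{1.3})$ whenever $u$ has a $2$-neighbor, but in the remaining case $n_2(u)=0$ you only announce that the configurations $(A_2)$--$(A_8)$ must supply the bound and that assembling them is ``the delicate step'' --- you never assemble them. Moreover this route does not obviously exist: an $8$-vertex whose eight neighbors are pairwise non-adjacent $5$-vertices matches none of $(A_2)$--$(A_8)$ locally, since each of those configurations requires a $3$-, $4$- or $5$-vertex with quite specific adjacencies, so the bound $n_{6^+}(u)\ge 7$ cannot be extracted from them for an arbitrary $8^+$-vertex. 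To be fair, you have put your finger on a genuine soft spot: the paper's own proof treats the non-existence of $(A_{1.3})$ as an unconditional statement about every $8^+$-vertex, whereas that configuration is predicated on the existence of a path $uvw$ with $d(v)=2$, so the paper's argument likewise only covers $n_2(u)>0$. The mismatch is harmless downstream because (4) is only ever invoked through Corollary~\ref{coro01}, whose hypothesis $d(u)>d_H(u)$ is exactly $n_2(u)>0$; but a self-contained proof of (4) as literally stated would need either that extra hypothesis or an actual argument for the missing case, not a deferral.
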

\begin{proof}
The non-existence of ($A_{1.1}$) in $G$ states that $u w\in E(G)$ for each $2$-vertex $v$ with $N(v) = \{u, w\}$.

The non-existence of ($A_{1.2}$) in $G$ states that no $7^-$-vertex is adjacent to any $2$-vertex in $G$;
thus $3 \le d(u) \le 7$ implies $d_H(u) = d(u)$.

The non-existences of $(A_{1.3})$ in $G$ implies that every $8^+$-vertex $u$ is adjacent to at most ($d(u) - 7$) $2$-vertices in $G$;
thus if $d(u) \ge 8$, then $d_H(u) = d(u)- n_2(u) \ge d(u) - (d(u) - 7) = 7$.

Note that all $2$-vertices are deleted from $G$, a $k$-neighbor of $u$ in $G$ remains as a $k$-neighbor of $u$ in $H$, $k\in [3, 6]$,
and a $7^+$-neighbor of $u$ in $G$ remains as a $7^+$-neighbor of $u$ in $H$,
Therefore, $\delta(H)\ge 3$, $n'_2(u) = 0$, $n'_{k}(u) = n_{k}(u)$, for any $k\in [3, 6]$, and $n'_{5^-}(u) = n_{5^-}(u) - n_2(u)$, $n'_{6^+}(u) = n_{6^+}(u)$.

The non-existences of $(A_{1.3})$ in $G$ implies that every $8^+$-vertex $u$ is adjacent to at most ($d(u) - 6$) $5^-$-vertices in $G$;
thus if $d(u) \ge 8$, then $n'_{5^-}(u) = n_{5^-}(u) - n_2(u)\le d(u) - 7 - (d(u) - d_H(u)) = d_H(u) - 7$, and $n'_{6^+}(u) \ge 7$.
\end{proof}

By Lemma~\ref{lemma01}, for any $k$-vertex, where $k\in [3, 6]$, we do not distinguish which graph, $G$ or $H$, it is in;
furthermore, we use $d(u)$ instead of $d_H(u)$, use $n_k(u)$ instead of $n'_k(u)$, and use $n_{6^+}(u)$ instead of $n'_{6^+}(u)$.

\begin{corollary}
\label{coro01}
For any vertex $u \in V(H)$, if $d(u)> d_H(u)\ge 7$, then $n'_{5^-}(u)\le d_H(u) - 7$, $n_{6^+}(u) = n'_{6^+}(u) \ge 7$;
and if $n'_{5^-}(u) = n_{5^-}(u) > 0$, then $d_H(u)\ge 7$.
\end{corollary}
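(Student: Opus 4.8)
The plan is to read Corollary~\ref{coro01} as a bookkeeping consequence of Lemma~\ref{lemma01}: each implication is obtained by translating its hypothesis into a statement about $d(u)$, $d_H(u)$, and $n_2(u)$ through the two identities $d_H(u) = d(u) - n_2(u)$ and $n'_{5^-}(u) = n_{5^-}(u) - n_2(u)$, and then quoting the appropriate part of the lemma. No new combinatorial input is needed beyond the exclusion of $(A_{1.2})$ already used in Lemma~\ref{lemma01}.

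For the first implication I would start from $d(u) > d_H(u) \ge 7$. This already forces $d(u) \ge 8$; equivalently, $d(u) > d_H(u)$ gives $n_2(u) = d(u) - d_H(u) > 0$, so $u$ has a $2$-neighbor in $G$ and hence is an $8^+$-vertex because $(A_{1.2})$ is excluded. I may therefore invoke Lemma~\ref{lemma01}(4), which yields $n'_{5^-}(u) \le d_H(u) - 7$ and $n'_{6^+}(u) \ge 7$. Lemma~\ref{lemma01}(3) then identifies $n_{6^+}(u) = n'_{6^+}(u)$, so $n_{6^+}(u) = n'_{6^+}(u) \ge 7$, which finishes this half.

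For the second implication I would again use $n'_{5^-}(u) = n_{5^-}(u) - n_2(u)$ to convert the hypothesis into information about $n_2(u)$: the difference between the $G$-count and the $H$-count of $5^-$-neighbors is precisely $n_2(u)$, so the hypothesis records whether $u$ has a deleted $2$-neighbor. In the case that makes the conclusion nontrivial, $n_2(u) > 0$, the vertex $u$ is adjacent to a $2$-vertex of $G$; since $(A_{1.2})$ is excluded, no $7^-$-vertex can be adjacent to a $2$-vertex, so $d(u) \ge 8$, and Lemma~\ref{lemma01}(2) immediately gives $d_H(u) \ge 7$.

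I do not expect a genuine obstacle: the whole argument is an unwinding of the two identities together with the correct selection among parts (2)--(4) of Lemma~\ref{lemma01}, which is exactly why the statement is a corollary rather than a lemma. The only point requiring care is to keep the $G$-side quantities $d(u), n_{5^-}(u), n_{6^+}(u)$ separate from the $H$-side quantities $d_H(u), n'_{5^-}(u), n'_{6^+}(u)$ and to pass between them solely through $n_2(u)$, recalling that the two sides coincide for $6^-$-vertices but may differ for $8^+$-vertices.
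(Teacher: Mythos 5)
Your treatment of the first implication is correct and is exactly the intended (unwritten) derivation: $d(u) > d_H(u) \ge 7$ forces $d(u) \ge 8$, Lemma~\ref{lemma01}(4) then gives $n'_{5^-}(u) \le d_H(u) - 7$ and $n'_{6^+}(u) \ge 7$, and Lemma~\ref{lemma01}(3) supplies the identification $n_{6^+}(u) = n'_{6^+}(u)$. Nothing more is needed for that half.

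The second implication is where your argument breaks. The identity $n'_{5^-}(u) = n_{5^-}(u) - n_2(u)$ that you yourself invoke turns the stated hypothesis $n'_{5^-}(u) = n_{5^-}(u)$ into $n_2(u) = 0$; so the case you actually handle, $n_2(u) > 0$, is vacuous under that hypothesis, and the case you wave off as making the conclusion ``trivial'' is the only one that occurs --- and there the conclusion does not follow. With $n_2(u) = 0$ and $n_{5^-}(u) > 0$ the claim $d_H(u) = d(u) \ge 7$ is false in general under the standing assumption that $G$ contains none of ($A_1$)--($A_8$): two adjacent $5$-vertices whose common edge lies in at most one $3$-cycle, or a $6$-vertex adjacent to a $4$-vertex with no triangle through that edge, are excluded by none of those configurations and violate the literal statement. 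What your paragraph actually establishes is that $n_2(u) > 0$ (equivalently $d(u) > d_H(u)$, equivalently $n_{5^-}(u) > n'_{5^-}(u)$) implies $d(u) \ge 8$ via the exclusion of ($A_{1.2}$) and hence $d_H(u) \ge 7$ by Lemma~\ref{lemma01}(2); that is true and is almost certainly what the (apparently garbled) printed hypothesis intends. You should either state explicitly that you are reading the hypothesis that way, or flag that the statement as printed cannot be proved; as written, your proof silently replaces the condition $n_2(u) = 0$ encoded in the hypothesis by its negation.
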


\begin{lemma}
\label{lemma02}
For any $uv \in E(H)$,
\begin{description}
\parskip=0pt
\item[{\rm (1)}]
	if $d(v) = 3$ with $N(v) = \{u, v_1, v_2\}$, then $d(u)\ge 8$ and $d_H(u)\ge 8$;
\item[{\rm (2)}]
	if $d(u) = 4$ with $N(u) = \{v, u_1, u_2, u_3\}$, then for each $i = 1, 2, 3$, we have $d_H(u_i)\ge 8$ and
	\begin{description}
	\parskip=0pt
	\item[{\rm (2.1)}]
		$d(u_i)\ge 10$ if $d(v) = 4$;
	\item[{\rm (2.2)}]
		$d(u_i)\ge 9$ if $d(v) = 5$;
	\item[{\rm (2.3)}]
		$d(u_i)\ge 8$ if $d(v) = 6$ and $u v$ is contained in a $3$-cycle;
	\item[{\rm (2.4)}]
		$d(u_i)\ge 10$ if $d(v) = 5$ and $u v$ is contained in two $3$-cycles;
	\item[{\rm (2.5)}]
		$d(u_i)\ge 9$ if $d(v) = 6$ and $u v$ is contained in two $3$-cycles;
	\item[{\rm (2.6)}]
		$d(u_i)\ge 9$ if $v$ is $(7, 4^-)$ and $u v$ is contained in two $3$-cycles $u u_2 v$, $u u_3 v$ with $u_1 u_2, u_1 u_3\in E(G)$.
	\end{description}
\end{description}
\end{lemma}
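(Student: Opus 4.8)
The plan is to prove both parts by contradiction, leaning entirely on the standing assumption that $G$ contains none of the configurations $(A_1)$--$(A_8)$, together with the structural facts already recorded in Lemma~\ref{lemma01} and Corollary~\ref{coro01}. Part (1) is controlled by $(A_2)$ and part (2) by $(A_6)$, and in both the pattern is identical: a failure of the claimed degree bound would exhibit one of the forbidden sub-configurations. The degree bounds on the original degree $d(\cdot)$ come from degree-sum (or minimum-degree) estimates, while the upgrades to the $H$-degree $d_H(\cdot)$ come uniformly from Corollary~\ref{coro01}.

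For part (1), let $v$ be a $3$-vertex with $N(v)=\{u,v_1,v_2\}$ and $uv\in E(H)$. Since $(A_2)$ does not occur for this edge, in particular $(A_{2.1})$ fails, which is precisely $d(u)>7$; hence $d(u)\ge 8$. To upgrade this to $d_H(u)\ge 8$ I would rule out $d_H(u)=7$: a $3$-vertex is preserved when the $2$-vertices are deleted, as is the edge $uv$, so $v$ is a $5^-$-neighbor of $u$ in $H$ and $n'_{5^-}(u)\ge 1$; but if $d(u)>d_H(u)=7$ then Corollary~\ref{coro01} forces $n'_{5^-}(u)\le d_H(u)-7=0$, a contradiction. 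Combined with $d_H(u)\ge 7$ from Lemma~\ref{lemma01}(2), this gives $d_H(u)\ge 8$.

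For part (2), fix a $4$-vertex $u$ with $N(u)=\{v,u_1,u_2,u_3\}$. First I record two preliminary facts. Because $d(u)=4\le 7$, Lemma~\ref{lemma01}(1) gives $d_H(u)=d(u)=4$, so $u$ has no $2$-neighbor; and by part (1) no neighbor of $u$ can be a $3$-vertex, since such a $3$-vertex would force $d(u)\ge 8$. The uniform conclusion $d_H(u_i)\ge 8$ is then obtained exactly as in part (1): once a case below has established $d(u_i)\ge 8$, the only alternative to $d_H(u_i)\ge 8$ is $d_H(u_i)=7$, which would give $d(u_i)>d_H(u_i)=7$ and hence $n'_{5^-}(u_i)=0$ by Corollary~\ref{coro01}, contradicting that the $4$-vertex $u$ is a $5^-$-neighbor of $u_i$ in $H$. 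Thus $d_H(u_i)\ge 8$ holds as soon as the original-degree bound does.

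It remains to produce the case-by-case lower bounds on $d(u_i)$, each obtained by matching the hypotheses on $v$ to a variant of $(A_6)$ and bounding either $\sum_{x\in N(u)}d(x)$ or the minimum degree among the $u_i$. Concretely: if some $d(u_i)$ were too small while $d(v)\in\{4,5\}$, then $\sum_{x\in N(u)}d(x)\le d(v)+d(u_i)+2\Delta\le 2\Delta+13$, triggering $(A_{6.1})$ and yielding (2.1)--(2.2); the analogous estimate under the extra hypothesis that $uv$ lies in one (resp.\ two) $3$-cycle(s) triggers $(A_{6.2})$ (threshold $2\Delta+13$) and $(A_{6.3})$ (threshold $2\Delta+14$), yielding (2.3)--(2.5); and when $v$ is $(7,4^-)$ with the edges $vu_2,vu_3,u_1u_2,u_1u_3$ present, a neighbor of degree $\le 8$ directly realizes $(A_{6.4})$, yielding (2.6). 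In each case the contradiction with the absence of $(A_6)$ forces the stated bound. I expect the main obstacle to be organizational rather than deep: one must carefully track which neighbor plays the role of $v$ in each $(A_6)$ variant and verify that the incidence and triangle conditions demanded by $(A_{6.2})$--$(A_{6.4})$ are exactly those assumed in (2.3)--(2.6), so that each degree-sum (or $\min$-degree) inequality lands on the correct side of the thresholds $2\Delta+13$ and $2\Delta+14$.
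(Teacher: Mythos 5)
Your proposal is correct and follows essentially the same route as the paper: the paper's own proof is a two-line citation of the non-existence of $(A_{2.1})$, $(A_{1.1})$, $(A_{1.2})$ and $(A_6)$ together with Corollary~\ref{coro01}, and your write-up simply fills in the arithmetic (each degree-sum bound $d(v)+d(u_i)+2\Delta$ lands exactly on the thresholds $2\Delta+13$ and $2\Delta+14$ of $(A_{6.1})$--$(A_{6.3})$, and $(A_{6.4})$ handles (2.6) directly) and the upgrade from $d(u_i)\ge 8$ to $d_H(u_i)\ge 8$ via the $n'_{5^-}$ bound. No gaps.
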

\begin{proof}
The non-existence of ($A_{2.1}$) in $G$ states that no $7^-$-vertex is adjacent to a $3$-vertex in $G$;
thus $d(u)\ge 8$ and by Corollary~\ref{coro01}, $d_H(u)\ge 8$.
The non-existence of ($A_{1.1}$), ($A_{1.2}$), ($A_{2.1}$) and ($A_{6}$) in $G$ and Corollary~\ref{coro01} imply that (2.1)-(2.6) hold.
\end{proof}

\begin{lemma}
\label{lemma03}
Let $u$, $v$, $u_1$, $v_1$ be four $4^-$-vertices in $H$.
\begin{description}
\parskip=0pt
\item[{\rm (1)}]
  If $d(u) = 3$ and $d(v)\le 4$, or $d(u) = d(v) = 4$ with $u v\not\in E(H)$,
  then $F_3(u)\cap F_3(v) = \emptyset$.
\item[{\rm (2)}]
  If $d(u) = d(v) = 4$ with $u v\in E(H)$ and $d(u_1)\le 4$,
  then $(F_3(u)\cup F_3(v))\cap F_3(u_1) = \emptyset$.
\item[{\rm (3)}]
  If $d(u) = d(v) = 4$ with $u v\in E(H)$ and $d(u_1) = d(v_1) = 4$ with $u_1 v_1\in E(H)$,
  then $(F_3(u)\cup F_3(v))\cap (F_3(u_1)\cup F_3(v_1)) = \emptyset$. \hfill $\Box$
\end{description}
\end{lemma}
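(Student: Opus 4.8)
The plan is to prove \cref{lemma03} by a series of short arguments, each showing that two $4^-$-vertices cannot share a common $3$-face under the stated hypotheses. The key observation driving all three parts is the following structural fact, derived from \cref{lemma02}(1) and (2): if $u$ is a $4^-$-vertex in $H$ (hence $d(u)\in\{3,4\}$), then every neighbor of $u$ is an $8^+$-vertex (for a $3$-vertex, by part~(1); for a $4$-vertex, by part~(2)). Consequently, if a $3$-face contains a $4^-$-vertex $u$, the other two vertices of that face are neighbors of $u$ and therefore are $8^+$-vertices. This immediately rules out most collisions: a $3$-face incident with $u$ has at most one $4^-$-vertex on it unless two $4^-$-vertices are themselves adjacent and lie together on a common $3$-face.

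For part~(1), I would argue by contradiction. Suppose $f\in F_3(u)\cap F_3(v)$ is a common $3$-face. Its three vertices include both $u$ and $v$; since $f$ is a triangle, $u$ and $v$ are two of its three vertices and hence $uv$ is an edge of $f$, so $uv\in E(H)$. If $d(u)=3$, this contradicts \cref{lemma02}(1), which forces every neighbor of the $3$-vertex $u$ to be an $8^+$-vertex, whereas $v$ is a $4^-$-vertex. If instead $d(u)=d(v)=4$, the hypothesis explicitly assumes $uv\notin E(H)$, again a contradiction. Either way $F_3(u)\cap F_3(v)=\emptyset$.

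For parts~(2) and~(3), the same contradiction engine applies, but I must account for the new hypothesis that $u$ and $v$ \emph{are} adjacent $4$-vertices. The idea is that the triangle $uv$ can belong to at most one $3$-face pairing $u$ with $v$, but any \emph{other} $3$-face of $u$ or of $v$ must have its remaining two vertices be $8^+$-neighbors. In part~(2), suppose $f\in(F_3(u)\cup F_3(v))\cap F_3(u_1)$; then $u_1$, a $4^-$-vertex, shares a triangle with $u$ or with $v$, forcing $u_1$ to be adjacent to that $4$-vertex. By \cref{lemma02}(2) the neighbors of a $4$-vertex are all $8^+$-vertices, contradicting $d(u_1)\le 4$. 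Part~(3) is handled identically: any common $3$-face would place one of $u_1,v_1$ adjacent to one of $u,v$, yet all four are $4^-$-vertices and the neighbor of any $4$-vertex must be an $8^+$-vertex by \cref{lemma02}(2).

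The step I expect to require the most care is not the logic — which is a clean repeated application of \cref{lemma02} — but the precise bookkeeping of which vertex plays which role on the shared triangle, particularly in part~(3) where two disjoint edges $uv$ and $u_1v_1$ are in play and a shared $3$-face could in principle identify vertices across the two edges. The main obstacle is therefore to verify that no degenerate coincidence (such as $u_1$ equaling $u$ or $v$, or the triangle collapsing) slips through; since \cref{lemma03} stipulates that $u,v,u_1,v_1$ are four vertices, I would note their distinctness up front and then observe that a common $3$-face forces an edge between a vertex of $\{u,v\}$ and a vertex of $\{u_1,v_1\}$, which is impossible because both endpoints would be $4^-$-vertices while \cref{lemma02}(2) requires every neighbor of a $4$-vertex to have degree at least $8$.
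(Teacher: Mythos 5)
Your approach is the intended one: the paper states Lemma~\ref{lemma03} with the proof omitted precisely because it is an immediate consequence of Lemma~\ref{lemma02} in the way you describe --- two vertices on a common $3$-face are adjacent, and Lemma~\ref{lemma02} forbids adjacencies between $4^-$-vertices except along the single designated edge $uv$ (resp.\ $u_1v_1$). The one thing you must repair is your statement of the ``key observation.'' It is \emph{not} true that every neighbor of a $4$-vertex is an $8^+$-vertex; read literally, that claim would make the hypotheses of parts (2) and (3) --- two adjacent $4$-vertices --- vacuous. What Lemma~\ref{lemma02}(2) actually says is: for a $4$-vertex $u$ and any fixed neighbor $v$, the three neighbors of $u$ \emph{other than} $v$ all satisfy $d_H\ge 8$; equivalently (letting $v$ range over $N(u)$), a $4$-vertex has at most one $4^-$-neighbor. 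For a $3$-vertex the blanket statement is correct by Lemma~\ref{lemma02}(1), so part (1) is fine as written. Your arguments for (2) and (3) also survive, but only because you note that $u,v,u_1,v_1$ are four distinct vertices: a shared $3$-face makes $u_1$ (or $v_1$) adjacent to $u$ or to $v$, and since $u_1\notin\{u,v\}$ this gives that $4$-vertex a \emph{second} $4^-$-neighbor besides its partner on the edge $uv$, contradicting Lemma~\ref{lemma02}(2) applied to that edge. State it that way and the proof is complete; as currently phrased, the auxiliary fact you lean on is false and would be flagged by any careful reader.
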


\subsection{Discharging to show contradictions}
To derive a contradiction, we make use of the discharging methods, which are very similar to an amortized analysis through token re-distribution.
First, by Euler's formula on the embedding $|V(H)| - |E(H)| + |F(H)|= 2$ and the relation $\sum_{v\in V(H)}d_H(v) = \sum_{f\in F(H)}d(f) = 2|E(H)|$,
we have the following equality:
\begin{equation}\label{eq01}
\sum_{u\in V(H)}(4 - d_H(u)) + \sum_{f\in F(H)}(4 - d(f)) = 8.
\end{equation}
Next, we define an initial {\em weight} (i.e., {\em token}, or {\em charge}) function $\omega(\cdot)$ on $V(H)\cup F(H)$ by
setting $\omega(u) = 4 - d_H(u)$ for each vertex $u\in V(H)$ and
setting $\omega(f) = 4 - d(f)$ for each face $f\in F(H)$.
It follows from Eq.~(\ref{eq01}) that the total weight is equal to $8$.
We define next a set of discharging rules (R1)--(R5) to move portions of weights from $x$ to $z$ (possibly through $y$), that is,
the function $\tau(x \rightarrow z)$ (or $\tau(x \rightarrow z)_y$), where $x, y, z\in V(H)\cup F(H)$.
The effect of this function decreases the charge at $x$ by $\tau(x \rightarrow z)_y$, leaves the charge of $y$ unchanged, and increases the charge of $z$ by $\tau(x \rightarrow z)_y$.
For counting purposes, we say that $x$ sends the charge to $y$, and $z$ receives the charge from $y$.
At the end of the discharging, a new weight function $\omega'(\cdot)$ is achieved and we will show that $\omega'(x) \le 0$ for every element $x \in V(H)\cup F(H)$.
This contradicts the positive total weight.

Before presenting the discharging rules, we introduce an important parameter $\alpha_u$ for a vertex $u\in V(H)$ satisfying $d_H(u) = k\ge 6$ and $n'_{5^-}(u)> 0$:
\begin{equation}
\label{eq02}
\alpha_u = \left\{
\begin{array}{ll}
\frac{k - 4 - \frac 13(k - 2n'_{5^-}(u))}{2n'_{5^-}(u)} 	= \frac{k - 6}{3n'_{5^-}(u)} + \frac 13,	&\mbox{ if }	n'_{5^-}(u)\le \lfloor\frac k2\rfloor;\\
\frac{k - 4}k = 1 - \frac 4k, 						&\mbox{ otherwise.}
\end{array}\right.
\end{equation}
The following properties hold for $\alpha_u$, of which some values are given in Figure~\ref{fig02}.
\begin{lemma}
\label{lemma04properties}
Let $d_H(u)= k\ge 6$ with $n'_{5^-}(u)> 0$.
\begin{description}
\parskip=0pt
\item[{\bf\rm (1)}]
	$\alpha_u \ge 1 - \frac 4k$.
\item[{\bf\rm (2)}]
	If $d_H(u) = 6$, then $\alpha_u = \frac 13$;
	if $d_H(u)\ge 12$, then $\alpha_u \ge \frac 23$.
\item[{\bf\rm (3)}]
	If $u$ is $(k, 7^+)_H$, then {\rm (i)} if $n'_{5^-}(u)\le \lfloor\frac k2\rfloor$, then $\alpha_u \ge \frac 23 + \frac 1{3n'_{5^-}(u)}> \frac 23$;
	{\rm (ii)} $\alpha_u \ge \frac 57> \frac 23$.
\item[{\bf\rm (4)}]
	If $d(u) > d_H(u)\ge 8$, then $\alpha_u \ge \frac 57> \frac 23$.
\item[{\bf\rm (5)}]
	If $u$ is $(k, 6^+)_H$, then $\alpha_u \ge \frac 23$.
\item[{\bf\rm (6)}]
	If $\alpha_u < \frac 23$, then $d(u) = d_H(u)= k\in [7, 11]$, $u$ is $(k, 5^-)$, and if $n'_{3}(u)> 0$, then $k\in [8, 11]$.
\end{description}
\end{lemma}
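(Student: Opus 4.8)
The plan is to treat the two branches of the definition in Eq.~(\ref{eq02}) separately, writing $n = n'_{5^-}(u)$ throughout and recalling that $d_H(u) = k = n'_{5^-}(u) + n'_{6^+}(u)$, so that each part reduces to an elementary inequality. For part (1), equality holds by definition in the second branch, while in the first branch the branch condition $n \le \lfloor k/2\rfloor \le k/2$ gives $\frac{k-6}{3n} \ge \frac{2(k-6)}{3k}$ (using $k \ge 6$), whence $\alpha_u \ge \frac{2(k-6)}{3k} + \frac13 = 1 - \frac4k$. Part (2) is then immediate: for $k = 6$ both branches evaluate to $\frac13$, and for $k \ge 12$ part (1) yields $\alpha_u \ge 1 - \frac{4}{12} = \frac23$.

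For part (3), the hypothesis $n'_{6^+}(u) \ge 7$ translates into $n \le k - 7$. In the first branch, substituting $k \ge n + 7$ into $\alpha_u = \frac{k-6}{3n} + \frac13$ gives $\alpha_u \ge \frac{n+1}{3n} + \frac13 = \frac23 + \frac{1}{3n}$, which is (i). For (ii) I would split on the size of $n$: when $n \ge 7$ the branch constraint $2n \le k$ gives $\alpha_u \ge \frac{2n-6}{3n} + \frac13 = 1 - \frac2n \ge \frac57$, whereas when $n \le 6$ part (i) already gives $\alpha_u \ge \frac23 + \frac{1}{18} = \frac{13}{18} > \frac57$; and in the second branch the two constraints $n > \lfloor k/2\rfloor$ and $n \le k-7$ force $k \ge 15$, so $\alpha_u = 1 - \frac4k \ge \frac{11}{15} > \frac57$. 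Part (4) is then a corollary: by Corollary~\ref{coro01}, $d(u) > d_H(u) \ge 8$ forces $n'_{6^+}(u) \ge 7$, so $u$ is $(k,7^+)_H$ with $k \ge 8$ and part (3)(ii) applies.

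Part (5) follows the same template with the weaker hypothesis $n'_{6^+}(u) \ge 6$, i.e.\ $n \le k - 6$ (and $n > 0$ forces $k \ge 7$): in the first branch $\frac{k-6}{3n} \ge \frac{k-6}{3(k-6)} = \frac13$ gives $\alpha_u \ge \frac23$, and in the second branch the constraints $n > \lfloor k/2\rfloor$, $n \le k-6$ force $k \ge 13$, so $1 - \frac4k \ge \frac{9}{13} > \frac23$. Finally, part (6) is obtained by reading parts (2), (4), (5) contrapositively: $\alpha_u < \frac23$ rules out $u$ being $(k,6^+)_H$ by part (5), so $u$ is $(k,5^-)$; it rules out $k \ge 12$ by part (2), so $k \le 11$; and it rules out $d(u) > d_H(u)$, since that would force $d_H(u) \ge 8$ (a $7$-vertex of $H$ with a deleted $2$-neighbor would have $n'_{5^-}(u) = 0$ by Corollary~\ref{coro01}) and hence $\alpha_u \ge \frac57$ by part (4). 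The refinement under $n'_3(u) > 0$ comes from Lemma~\ref{lemma02}(1), which forces $d_H(u) \ge 8$ and hence $k \ge 8$.

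I expect the only real obstacle to be part (3)(ii), together with the analogous threshold computation in part (5), where the bound must be shown to be tight at a specific configuration — here $\alpha_u = \frac57$ is attained at $k = 14$, $n = 7$ — so one must confirm that the interaction between the branch condition $n \le \lfloor k/2\rfloor$, the degree bound $n \le k-7$, and the monotonicity of $\alpha_u$ in $n$ never drives $\alpha_u$ below $\frac57$; the case split on $n \le 6$ versus $n \ge 7$ is what cleanly sidesteps the floor function. I also note that, taken literally, the first branch already gives $\alpha_u = \frac13 < \frac23$ when $k = 6$, so the range emerging in part (6) is most naturally $[6,11]$, with the degenerate value $k=6$ pinned down exactly by part (2).
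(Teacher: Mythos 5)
Your proof is correct and follows essentially the same route as the paper's: a direct case analysis on the two branches of Eq.~(\ref{eq02}) using the same elementary inequalities (the paper decides which branch applies by splitting on the value of $k$ --- e.g.\ $k\ge 14$ versus $k\le 14$ in (3), $k\ge 12$ versus $k\le 11$ in (5) --- whereas you split on the branch itself and then on $n'_{5^-}(u)$, but the resulting computations are identical, and your threshold checks $\frac{13}{18}>\frac 57$, $\frac{11}{15}>\frac 57$, $\frac 9{13}>\frac 23$ all hold). Your closing observation about part (6) is also well taken: as literally stated the conclusion $k\in[7,11]$ fails when $k=6$ (where $\alpha_u=\frac 13<\frac 23$), and the paper's own proof silently passes over this case; the slip is harmless only because every invocation of (6) in the discharging arguments occurs at vertices already known to satisfy $d_H\ge 8$.
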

\begin{proof}
By the definition of $\alpha_u$ in Eq.~(\ref{eq02}), to prove (1) it suffices to assume that $n'_{5^-}(u)\le \lfloor\frac k2\rfloor$.
In this case, we have $\alpha_u - (1 - \frac 4k) = \frac{k - 6}{3n'_{5^-}(u)} - \frac 23 + \frac 4k \ge \frac{k - 6}{3\times \frac{k}{2}} - \frac 23 + \frac 4k = 0$,
i.e., $\alpha_u \ge 1 - \frac 4k$.

For (2), if $d_H(u) = 6$, then by the definition $\alpha_u = \frac 13$;
if $d_H(u) \ge 12 $, then by (1) $\alpha_u \ge \frac 23$.

To prove (3), if $d_H(u)\ge 14$, then by (1), $\alpha_u \ge 1 - \frac 4k \ge \frac 57$.
If $d_H(u)\le 14$, then $n'_{5^-}(u) \le k - 7 \le \lfloor\frac k2\rfloor\le 7$;
thus, $\alpha_u = \frac{k - 6}{3n'_{5^-}(u)} + \frac 13 \ge \frac 1{3n'_{5^-}(u)} + \frac 23 \ge \frac 57$.

By Corollary~\ref{coro01}, if $d(u) > d_H(u)$, then we have $n'_{5^-}(u)\le d_H(u) - 7$ and thus $u$ is $(k, 7^+)_H$.
It follows from (3) that $\alpha_u \ge \frac 57> \frac 23$.
This proves (4).

To prove (5), if $d_H(u)\ge 12$, then by (2) $\alpha_u \ge \frac 23$.
If $d_H(u)\le 11$, then $n'_{5^-}(u)\le k - 6 \le \lfloor\frac k2\rfloor$;
thus, $\alpha_u  =\frac{k - 6}{3n'_{5^-}(u)} + \frac 13
\ge \frac{k - 6}{3(k - 6)} + \frac 13
= \frac{2}{3}. $

Lastly, if $\alpha_u < \frac 23$, then by (2) and (5) $d(u) = d_H(u) = k\in [7, 11]$ and $u$ is $(k, 5^-)$.
If $n'_{3}(u)> 0$, then by the non-existence of ($A_{2.1}$) in $G$, we have $k\in [8, 11]$.
This proves (6).
\end{proof}

\begin{figure}[]
\begin{center}
\includegraphics[width=0.8\textwidth]{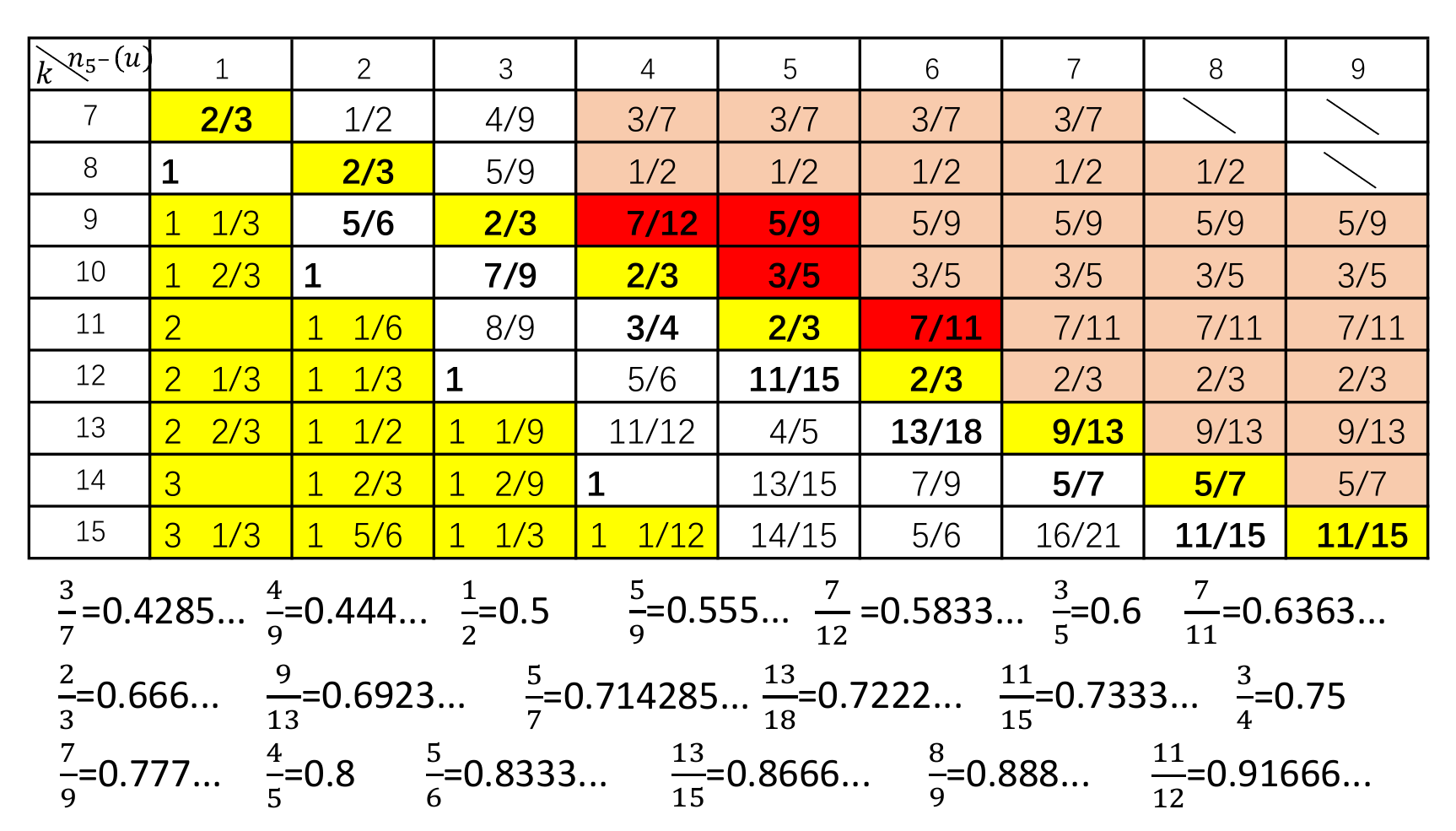}
\caption{The $\alpha_u$ value for a vertex $u$ with $d_H(u) = k\in [7, 15]$ and $n'_{5^-}(u)\le 9$.\label{fig02}}
\end{center}
\end{figure}

We now present the discharging rules (R1)--(R5).
A graphical view of how weights are moved from an element to another is depicted in Figure~\ref{fig03}.

\begin{description}
\parskip=0pt
\item[{\rm (R1)}]
	For each $4^-$-vertex $u$ adjacent to a $6^+$-vertex $v$ with $F(u v) = \{f_1, f_2\}$,
	if $d(f_1)\ge 4$ and $d(f_2)\ge 4$, then $\tau(u \rightarrow v)_{f_1} = \tau(u \rightarrow v)_{f_2} = \frac{\alpha_v}2$.
\end{description}
In the other four rules (R2)--(R5), $f= [uvw]$ is a $3$-face.

\begin{description}
\parskip=0pt
\item[{\rm (R2)}]
	If $\delta(f)\ge 6$, then for any $x\in V(f)$, $\tau(f \rightarrow x) = \frac 13$.
\end{description}
In the rest three rules (R3)--(R5), $\delta(f) \le 5$ and $d(u)\le 5$.

\begin{description}
\parskip=0pt
\item[{\rm (R3)}]
	For any $x\in \{v, w\}$ with $d_H(x)\ge 6$,
	\begin{description}
	\parskip=0pt
	\item[{\rm (R3.1)}]
		$\tau(f \rightarrow x)_f = \alpha_x$;
	\item[{\rm (R3.2)}]
		$\tau(f \rightarrow x)_{f_{ux}} = \begin{cases}
			\frac {\alpha_x}2,	&\mbox{ if } d(f_{ux})\ge 4;\\
			0,							&\mbox{ if } d(f_{ux}) = 3.
			\end{cases}$
	\end{description}
\item[{\rm (R4)}]
	If $d(u) = 4$, $d(v) = 5$ and $m_3(u) = 4$ with $f_{u v} = [u v w^*]$, then
	\begin{description}
	\parskip=0pt
	\item[{\rm (R4.1)}]
		$\tau(F(u v) \rightarrow v) = \frac 1{10}$ if $\min\{d(f_{v w}), d((f_{u v})_{v w^*})\} = 3$ and $\max\{d(f_{v w}), d((f_{u v})_{v w^*})\}\ge 4$;
	\item[{\rm (R4.2)}]
		$\tau(F(u v) \rightarrow v) = \frac 25$ if $d(f_{v w}) = d((f_{u v})_{v w^*}) = 3$.
	\end{description}
\item[{\rm (R5)}]
	If $\delta(f) = d(u) = 5$ with $n_5(f) = k$, then
	\begin{description}
	\parskip=0pt
	\item[{\rm (R5.1)}]
		if after applying (R3), the face $f$ still has charge $\beta \ge 0$, then $\tau(f \rightarrow u) = \frac{\beta}{k}$;
	\item[{\rm (R5.2)}]
		if $d_H(v)\ge 6$, $d_H(w)\ge 6$, and $\alpha_v + \alpha_w\ge 1$, then $\tau(u \rightarrow f) = \alpha_v + \alpha_w - 1$.
	\end{description}
\end{description}

\begin{figure}[h]
\begin{center}
\includegraphics[width=0.7\textwidth]{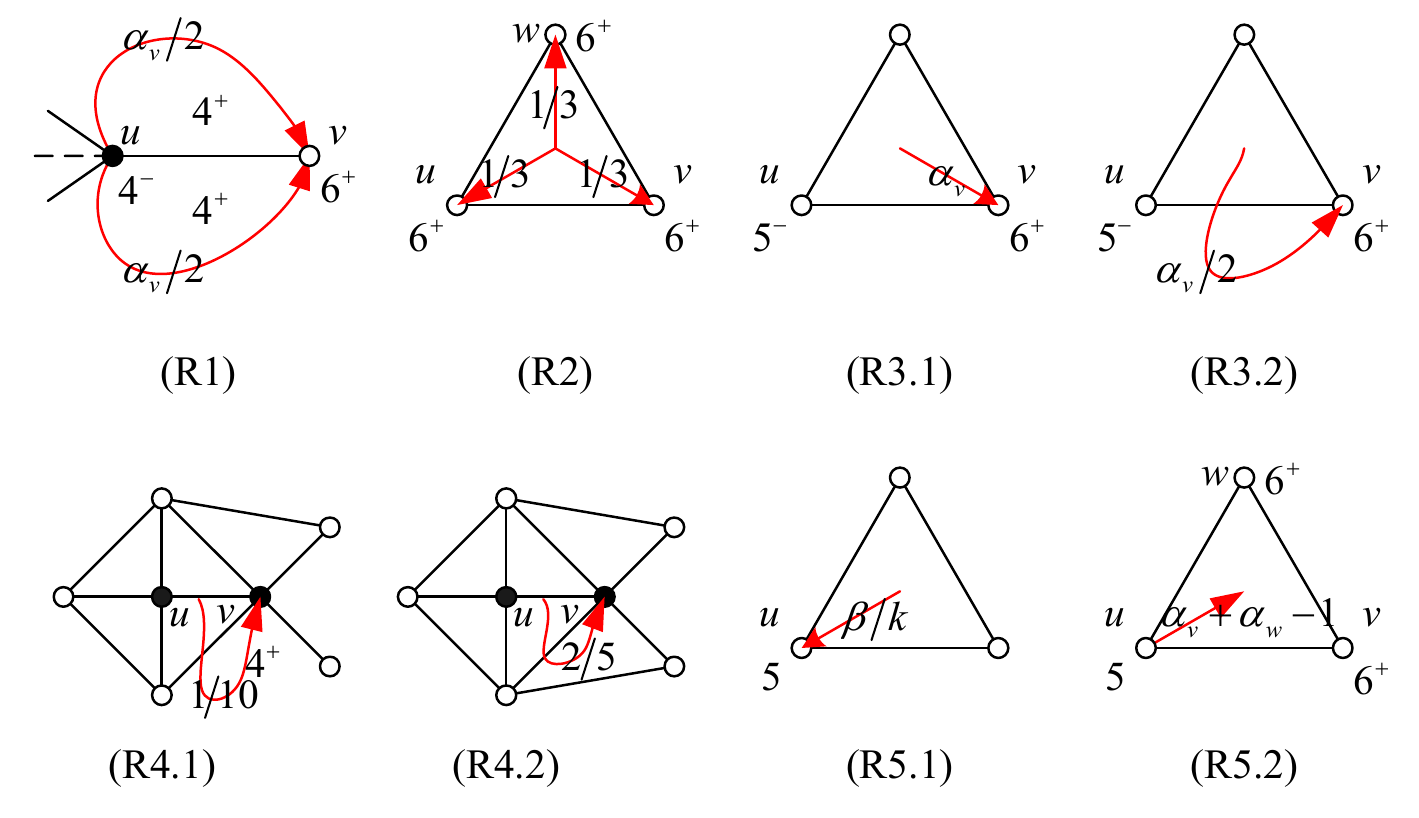}
\caption{The discharging rules (R1)--(R5): A graphical view of how weights are moved from an element to another by red arrows,
	each is labeled with the weight.\label{fig03}}
\end{center}
\end{figure}

In the rest of this section we prove that at the end of discharging the new weight function $\omega'(\cdot) \le 0$, through ten Discharging-Lemmas.
The first of which is obvious.

\begin{demma}
\label{demma01}
For every face $f$, if $d(f)\ge 4$, then $f$ does not receive or give any weight and thus $\omega'(f) = \omega(f) = 4 - d(f)\le 0$.
\end{demma}
\begin{proof}
This holds trivially since the discharging rules (R1)--(R5) do not apply to $f$ with $d(f)\ge 4$.
\end{proof}

\begin{demma}
\label{demma02}
For every vertex $v\in V(H)$ with $d_H(v) = k \ge 6$, $\omega'(v)\le 0$.
\end{demma}
\begin{proof}
Let $f = [\ldots u v w\ldots]$ be a face incident to $v$.
If $d_H(u)\ge 6$ and $d_H(w)\ge 6$, then by (R2), $\tau(f \rightarrow v)\le \frac 13$.
If $d_H(u)\le 5$ and $d(f) = 3$, then by (R3.1), $\tau(f \rightarrow v)_f = \alpha_v$.
If $d_H(u)\le 5$ and $d(f)\geq 4$, then by (R1), (R3.1) and (R3.2), $\tau(u \rightarrow v)_f + \tau(f_{v u} \rightarrow v)_f = \frac {\alpha_v}2$,
and $\tau(w \rightarrow v)_f + \tau(f_{vw} \rightarrow v)_f \le \frac {\alpha_v}2$.

Let $k_1$ be the number of faces $f = [\ldots u v w\ldots]$ incident to $v$ such that $d_H(u)\ge 6$ and $d_H(w)\ge 6$.
The total weight transferred to $v$ is at most $k_1 \times \frac{1}{3} + (d_H(v) - k_1)\times \alpha_v$.
It follows that
$\omega'(v) \le \omega(v) + k_1 \times \frac 13 + (d_H(v) - k_1)\times \alpha_v
= 4 - d_H(v) + k_1 \times \frac 13 + (d_H(v) - k_1)\times \alpha_v
= 4 - k + k\times \alpha_v - (\alpha_v - \frac 13)\times k_1$.

If $n'_{5^-}(u)\ge \lceil\frac k2\rceil$,
then $\omega'(v)\le 4 - k + k\times \alpha_v 
= 4 - k + k\times \frac {k - 4}k = 0$.
If $n'_{5^-}(u)\le \lfloor\frac k2\rfloor$, then it follows from $k_1\ge k - 2n'_{5^-}(u)$ that
$\omega'(v)\le 4 - k + k\times \alpha_v - (\alpha_v - \frac 13)\times (k - 2n'_{5^-}(u))
= 4 - \frac 23\times k + \frac 23\times (k - 6)
= 0$.
\end{proof}

\begin{demma}
\label{demma03}
For every vertex $v\in V(H)$ with $d(v) = 3$, $\sum_{x\in \{v\}\cup F_3(v)}\omega'(x)\le 0$.
\end{demma}
\begin{proof}
Let $u_i v\in E(H)$, and $f_i = [\ldots u_i v u_{(i\bmod3) + 1} \ldots]$, $i\in \{1, 2, 3\}$, be the three faces incident to $v$.
One sees from Lemma~\ref{lemma02}(1) that $d(u_i)\ge 8$,
$d_H(u_i)\ge 8$ and $\alpha_{u_i}\ge \frac 12$ for each $i = 1, 2, 3$.
We distinguish the following four cases for $m_3(v)$.

{\bf Case 1.} $m_3(v) = 0$.
In this case, by (R1), $\tau(v \rightarrow u_1)_{f_3} + \tau(v \rightarrow u_1)_{f_1} = \alpha_{u_1}$,
$\tau(v \rightarrow u_2)_{f_1} + \tau(v \rightarrow u_2)_{f_2} = \alpha_{u_2}$, and
$\tau(v \rightarrow u_3)_{f_2} + \tau(v \rightarrow u_3)_{f_3} = \alpha_{u_3}$.
Thus,
\[
\sum_{x\in \{v\}\cup F_3(v)}\omega'(x)
= \omega'(v) = \omega(v) - \sum_{i\in [1, 3]}\tau(v \rightarrow u_i)
= 1- \sum_{i\in [1, 3]}\alpha_{u_i} \le 1- 3\times \frac 12 < 0.
\]

{\bf Case 2.} $m_3(v) = 1$ with $d(f_2) = 3$.
In this case, by (R1), (R3.1) and (R3.2), $\tau(v \rightarrow u_1)_{f_3} + \tau(v \rightarrow u_1)_{f_1} = \alpha_{u_1}$,
$\tau(f_2 \rightarrow u_2)_{f_1} = \frac {\alpha_{u_2}}2$,
$\tau(f_2 \rightarrow u_3)_{f_3} = \frac {\alpha_{u_3}}2$, and 
for each $i\in [2, 3]$, $\tau(f_2 \rightarrow u_i)_{f_2} = \alpha_{u_i}$.
Thus,
\begin{equation*}
\begin{aligned}
\sum_{x\in \{v\}\cup F_3(v)}\omega'(x)
& = \sum_{x\in \{v, f_2\}}\omega(x)- \left(\tau(v \rightarrow u_1)
	+ \sum_{i\in [2, 3]}\tau(f_2 \rightarrow u_i) + \tau(f_2 \rightarrow u_2)_{f_1} +  \tau(f_2 \rightarrow u_3)_{f_3}\right)\\
& = \sum_{x\in \{v, f_2\}}(4 - 3) - (\alpha_{u_1} +  \alpha_{u_2} + \alpha_{u_3} + \frac 12 \alpha_{u_2} + \frac 12 \alpha_{u_3})
	\le 2\times 1 - 4\times \frac 12 = 0.
\end{aligned}
\end{equation*}

{\bf Case 3.} $m_3(v) = 2$ with $d(f_2) = d(f_3) = 3$.
In this case, by (R3.1) and (R3.2), $\tau(f_3 \rightarrow u_1)_{f_3} = \alpha_{u_1}$,
$\tau(f_3 \rightarrow u_1)_{f_1} = \frac {\alpha_{u_1}}2$,
$\tau(f_2 \rightarrow u_2)_{f_2} = \alpha_{u_2}$,
$\tau(f_2 \rightarrow u_2)_{f_1} = \frac {\alpha_{u_2}}2$,
and $\tau(f_2 \rightarrow u_3)_{f_2} = \tau(f_3 \rightarrow u_3)_{f_3} = \alpha_{u_3}$.
Thus,
\begin{equation}
\label{eq03}
\begin{aligned}
\sum_{x\in \{v\}\cup F_3(v)}\omega'(x)
& = \sum_{x\in \{v, f_2, f_3\}}\omega(x) - \left(\sum_{i\in [2, 3]}\tau(f_2 \rightarrow u_i)
	+ \sum_{i\in \{1, 3\}}\tau(f_3 \rightarrow u_i) + \tau(f_3 \rightarrow u_1)_{f_1} + \tau(f_2 \rightarrow u_2)_{f_1}\right)\\
& = \sum_{x\in \{v, f_2, f_3\}}(4 - 3)- (\alpha_{u_2} + \alpha_{u_3} + \alpha_{u_1} + \alpha_{u_3} + \frac 12 \alpha_{u_1} + \frac 12 \alpha_{u_2})\\
& = 3\times 1 - (2\alpha_{u_3} + \frac 32 \alpha_{u_1} + \frac 32 \alpha_{u_2}).
\end{aligned}
\end{equation}

Below we distinguish two cases where $u_1 u_2\in E(G)$ and $u_1 u_2\not\in E(G)$, respectively.

{Case 3.1.} $u_1u_2\in E(G)$.
In this case, for each $i\in [1, 3]$, the non-existence of ($A_{2.1}$) and ($A_{2.4}$) in $G$ states that $d(u_i)\ge 9$;
by Corollary~\ref{coro01}, no $7$-vertex is adjacent to any $5^-$-vertex in $H$ and $d_H(u_i)\ge 8$.
Thus, we have $\alpha_{u_i}\ge \frac 59$ for each $i\in [1, 3]$.
Note that if $d_H(u_i)= 8$ for some $i\in [1, 3]$, then $d(u_i)\ge 9 > d_H(u_i)$, and it follows from Lemma~\ref{lemma04properties} that $\alpha_{u_i}\ge \frac 57 > \frac23$.

When $\alpha_{u_3}\ge \frac 23$, Eq.~(\ref{eq03}) becomes
$\sum_{x\in \{v\}\cup F_3(v)}\omega'(x) 
 \le 3 - (2\times \frac 23 + \frac 32 \times \frac 59 + \frac 32\times \frac 59) = 0$.

When $\alpha_{u_1}\ge \frac{7}{11}$ and $\alpha_{u_2}\ge \frac 7{11}$, Eq.~(\ref{eq03}) becomes
$\sum_{x\in \{v\}\cup F_3(v)}\omega'(x)
 \le 3 - (2\times \frac 59 + \frac 32 \times \frac 7{11} + \frac 32 \times \frac 7{11}) 
= - \frac 19 + \frac 1{11} < 0$.

When $\alpha_{u_3}< \frac 23$ and $\alpha_{u_i}< \frac 7{11}$ for some $i\in [1, 2]$,
it follows from Lemma~\ref{lemma04properties} that $d_H(u_3) = d(u_3)\in [9, 11]$, $u_3$ is $(d(u_3), 5^-)$,
and there exists an $i\in [1, 2]$ such that $d_H(u_i) = d(u_i)\in [9, 10]$, $u_i$ is $(d(u_i), 5^-)$.
It follows that there exists a $w\in N(v)$ such that $w$ is $(10, 5^-)$.

Assume that $w_1$ is $(9, 5^-)$.
The non-existence of ($A_{3.3}$) in $G$ implies that $w_1$ is $(9, 5)$;
the non-existence of ($A_{3.6}$) in $G$ implies that there exists a $w_2\in N(v)\setminus \{w_1\}$ such that $w_2$ is $([9, 11], 5)$.
Let $w_3\in N(v)\setminus \{w_1, w_2\}$.
Then by the non-existence of ($A_{3.2}$) in $G$, we have $d_H(w_3)= d(w_3) = \Delta$ and $w_3$ is $(\Delta, 7^+)_{HG}$.
It is clear that $w_3\in \{u_1, u_2\}$, i.e., $w_3 \ne u_3$.
Therefore, $\alpha_{w_1}\ge \frac 7{12}$, $\alpha_{w_2}\ge \frac 7{12}$, and by Lemma~\ref{lemma04properties} $\alpha_{w_3}\ge \frac 57$.
Hence, Eq.~(\ref{eq03}) becomes
$\sum_{x\in \{v\}\cup F_3(v)}\omega'(x)
 \le 3 - (2\times \frac 7{12} + \frac 32\times \frac 7{12} + \frac 32\times \frac 57)
= - \frac 1{24} - \frac 1{14} < 0$.

Assume that there is no $(9, 5^-)$ in $\{u_1, u_2, u_3\}$ and $w_1$ is $(10, 5^-)$.
The non-existence of ($A_{4.1}$) in $G$ implies that there exists a $w_2\in N(v)\setminus \{w_1\}$ such that $w_2$ is $(10/11, 5)$.
Let $w_3\in N(v)\setminus \{w_1, w_2\}$.
Thus, by the non-existence of ($A_{4.3}$) in $G$, we have $d_H(w_3) = d(w_3) = \Delta$ and by the non-existence of ($A_{4.2}$) in $G$, $w_3$ is $(\Delta, 7^+)_{H G}$.
It is clear that $w_3\in \{u_1, u_2\}$, i.e., $w_3 \ne u_3$.
Therefore, $\alpha_{w_1}\ge \frac 35$, $\alpha_{w_2}\ge \frac 35$, and by Lemma~\ref{lemma04properties} $\alpha_{w_3}\ge \frac 57$.
Hence, Eq.~(\ref{eq03}) becomes
$\sum_{x\in \{v\}\cup F_3(v)}\omega'(x) 
 \le 3 - (2\times \frac 35 + \frac 32\times \frac 35 + \frac 32\times \frac 57)
= - \frac 1{10} - \frac 1{14} < 0$.

{Case 3.2.} $u_1 u_2\not\in E(G)$.
In this case, $d_H(u_3)\ge 12$, or $d(u_3) > d_H(u_3)$,
or $d_H(u_3) = d(u_3)\in [8, 11]$ and by the non-existence of ($A_{2.2}$) in $G$ $u_3$ is $(d(u_3), 6^+)_{HG}$.
Then by Lemma~\ref{lemma04properties}(2), (5) and (6), $\alpha_{u_3}\ge \frac 23$.

When $\alpha_{u_1}\ge \frac 59$ and $\alpha_{u_2}\ge \frac 59$, Eq.~(\ref{eq03}) becomes
$\sum_{x\in \{v\}\cup F_3(v)}\omega'(x) 
 \le 3 - (2\times \frac 23 + \frac 32 \times \frac 59 + \frac 32\times \frac 59) = 0$.

When $\alpha_{u_i}< \frac 59$ for some $i\in [1, 2]$,
it follows from Lemma~\ref{lemma04properties}(2) and (4) that there exists a $u_i\in \{u_1, u_2\}$ such that $d(u_i) = d_H(u_i) = 8$ and $u_i$ is $(8, 4^-)_{H G}$.
Assuming w.l.o.g. $u_1$ is such a vertex.
From the non-existence of ($A_{2.2}$) in $G$, $d_H(u_2)\ge 12$, or $d(u_2) > d_H(u_2)$, or $d_H(u_2) = d(u_2)\in [8, 11]$ and $u_2$ is $(d(u_2), 6^+)_{HG}$.
We have $\alpha_{u_2}\ge \frac 23$ by Lemma~\ref{lemma04properties}(2), (5) and (6).
Thus, Eq.~(\ref{eq03}) becomes
$\sum_{x\in \{v\}\cup F_3(v)}\omega'(x) 
 \le 3 - (2\times \frac 23 + \frac 32\times \frac 12 + \frac 32 \times \frac 23) = -\frac 1{12} < 0$.

{\bf Case 4.} $m_3(v) = 3$.
In this case, by a similar discussion as in Case 3.1 where $u_1 u_2\in E(G)$,
it suffices to assume that for each $i\in [1, 3]$, $\alpha_{u_i}\ge \frac 59$, and either $d(u_i)\ge 9$ or $u_i$ is $(8, 7^+)$.
It follows from (R3.1) that $\tau(f_3 \rightarrow u_1)_{f_3} = \tau(f_1 \rightarrow u_1)_{f_1} = \alpha_{u_1}$,
$\tau(f_1 \rightarrow u_2)_{f_1} = \tau(f_2 \rightarrow u_2)_{f_2} = \alpha_{u_2}$,
$\tau(f_2 \rightarrow u_3)_{f_2} = \tau(f_3 \rightarrow u_3)_{f_3} = \alpha_{u_3}$,
and thus,
\begin{equation}
\label{eq04}
\begin{aligned}
\sum_{x\in \{v\}\cup F_3(v)}\omega'(x)
& = \sum_{x\in \{v, f_1, f_2, f_3\}}\omega(x)- \left(\sum_{i\in \{1, 3\}}\tau(f_i \rightarrow u_1) + \sum_{i\in [1, 2]}\tau(f_i \rightarrow u_2)
	+ \sum_{i\in [2, 3]}\tau(f_i \rightarrow u_3)\right)\\
& = 4 - 2 (\alpha_{u_1} + \alpha_{u_2} + \alpha_{u_3}).
\end{aligned}
\end{equation}
If $\alpha_{u_i}\ge \frac 23$ for each $i\in [1, 3]$, then Eq.~(\ref{eq04}) clearly is $\le 0$;
otherwise, there exists an $i_0\in [1, 3]$ such that $\alpha_{u_{i_0}} < \frac 23$, and then by Lemma~\ref{lemma04properties}(6), $u_{i_0}$ is $([9, 11], 5^-)_{HG}$.
We discuss the following three subcases on the existence of a $(9/10/11, 5^-)$ vertex, which w.l.o.g. is assumed $u_3$.

{Case 4.1.} $u_3$ is $(9, 5^-)$.
In this case, the non-existence of ($A_{3.1}$) in $G$ states that there exits an $i^*\in [1, 2]$ such that $d(u_{i^*}) = d_H(u_{i^*}) = \Delta$.
If $\alpha_{u_i}\ge \frac {13}{18}$ for each $i\in [1, 2]$, then Eq.~(\ref{eq04}) becomes
$\sum_{x\in \{v\}\cup F_3(v)}\omega'(x) 
 \le 4 - 2 (\frac {13}{18} + \frac {13}{18} + \frac 59) = 0$.
Otherwise, there exits an $i_1\in [1, 2]$ such that $\alpha_{u_{i_1}} < \frac {13}{18} < \frac {11}{15}$.
One sees that if $d_H(u_{i_1})\ge 15$, or $u_{i_1}$ is $(14, 8^+)_H$, or $u_{i_1}$ is $([8, 13], 7^+)_H$,
then $\alpha_{u_{i_1}}\ge \frac {13}{18}$.
Therefore, $u_{i_1}$ is $([9, 13], 6^-)_{H G}$, or $u_{i_1}$ is $(14, 7^-)_H$.
We consider them separately below, assuming $i_1 = 2$:

(Case 4.1.1.) $u_2$ is $([9, 13], 6^-)_{H G}$.
In this case, the non-existence of ($A_{3.2}$) in $G$ states that $u_1$ is $(\Delta, (\Delta - d(u_2) + 7)^+)_{H G}$, i.e., $n'_{5^-}(u)\le d(u_2)- 7\le 6$.
Let $\{a, b\} = \{\alpha_{u_1}, \alpha_{u_2}\}$.
When $a\ge \frac 23$ and $b\ge \frac 79$, using $\alpha_{u_3} \ge \frac 59$ Eq.~(\ref{eq04}) becomes
$\sum_{x\in \{v\}\cup F_3(v)}\omega'(x) 
 \le 4 - 2 (\frac 79 + \frac 23 + \frac 59) = 0$.
We thus assume below that $a< \frac 23$ or $b < \frac 79$.
Similarly, if $\alpha_{u_3}\ge \frac 7{12}$, then we assume that $a< \frac 23$ or $b< \frac 34$.
\begin{itemize}
\parskip=0pt
\item[{\rm (1)}]
	$u_3$ is $(9, 4^-)$.
	The non-existence of ($A_{3.3}$) in $G$ states that $u_2$ is $([9, 13], 6)_{HG}$.
	It follows from $\alpha_{u_2}\ge \frac 23$ that $\alpha_{u_1}< \frac 79$.
	If $d(u_1) = \Delta\ge 14$, then $n'_{5^-}(u_1)\le d(u_2)- 7\le 6 \le \lfloor\frac k2\rfloor$,
	and thus $\alpha_{u_1} = \frac{k - 6}{3n'_{5^-}(u)} + \frac 13 \ge
	\frac 79$, a contradiction.
	If $u_1$ is $(9/10, 7^+)_{H G}$, or $u_1$ is $([11, 13], 8^+)_{H G}$, then $\alpha_{u_1}\ge \frac 79$, also a contradiction.
	Therefore, $d_H(u_1) = d(u_1)= \Delta\in [11, 13]$, and furthermore $u_1$ is $([11, 13], 7)_{H G}$.
	Since $d(u_2) < d(u_1)$ implies $n_{6^+}(u_1)\ge 8$, we have $u_2$ is $(\Delta, 6)_{H G}$,
	by which the local structure ($A_{3.4}$) is obtained, a contradiction again.
\item[{\rm (2)}]
	$u_3$ is $(9, 5)$ and $\alpha_{u_3}\ge \frac 7{12}$.
	The non-existence of ($A_{3.6}$) in $G$ states that $u_2$ is $([9, 13], 5/6)_{H G}$ and thus $\alpha_{u_2}\ge \frac 7{12}$.
	When $\alpha_{u_1}\ge \frac{5}{6}$, Eq.~(\ref{eq04}) becomes
	$\sum_{x\in \{v\}\cup F_3(v)}\omega'(x) 
	 \le 4 - 2 (\frac 56 + \frac 7{12} + \frac 7{12}) = 0$.
	When $\alpha_{u_1} < \frac 56$, one sees that $d_H(u_1) = d(u_1) = k$,
	and $u_1$ cannot be $(9^+, (k - 2)^+)_{H G}$, or $(11^+, (k - 3)^+)_{H G}$, or $(12^+, (k - 4)^+)_{H G}$, or $(14^+, (k - 5)^+)_{H G}$.
	Since $u_1$ is $(\Delta, (\Delta - d(u_2) + 7)^+)_{H G}$, we have $n'_{5^-}(u)\le d(u_2)- 7\le 6$, and $k\ge 12$ or $u_1$ is $(10/11, 7)_{H G}$.
	We discuss the following three subcases for three possible values of $d(u_2)$.
	\begin{itemize}
	\parskip=0pt
	\item
		$u_2$ is $(9, 5/6)_{H G}$.
		It follows that $d(u_1) = \Delta\ge 10$, and then $n'_{5^-}(u_1)\le d(u_2)- 7\le 2$.
		Thus, $u_1$ is $(10^+, (k-2)^+)_{H G}$.
	\item
		$u_2$ is $(10, 5/6)_{H G}$.
		If $d(u_1) = \Delta\ge 11$, then $n'_{5^-}(u_1)\le d(u_2)- 7\le 3$, and thus $u_1$ is $(11^+, (k - 3)^+)$.
		Otherwise, $u_1$ is $(10, 7)_{H G}$.
		It follows from $\alpha_{u_1} = \frac 79$ that $\alpha_{u_2}< \frac 23$.
		Then $u_2$ is $(10, 5)_{H G}$, and thus ($A_{3.7}$) is obtained, a contradiction.
	\item
		$u_2$ is $(11, 5/6)_{H G}$.
		If $d(u_1) = \Delta\ge 12$, then $n'_{5^-}(u_1)\le d(u_2)- 7\le 4$, and $u_1$ is $(12^+, (k - 4)^+)$.
		Otherwise, $u_1$ is $(11, 7)_{H G}$.
		It follows from $\alpha_{u_1} = \frac 34$ that $\alpha_{u_2}< \frac 23$.
		Then $u_2$ is $(11, 5)_{H G}$, and thus ($A_{3.7}$) is obtained, a contradiction.
	\item
		$u_2$ is $(12/13, 5/6)_{H G}$.
		It follows from $\alpha_{u_2}\ge \frac 23$ that $\alpha_{u_1}< \frac 34$.
		One sees that $d_H(u_1) = d(u_1) = k\ge 12$,
		and if $u_1$ is $(12^+, (k - 4)^+)_{H G}$, or $(13^+, (k - 5)^+)_{H G}$, or $(14^+, (k - 6)^+)_{H G}$, then $\alpha_{u_1}\ge \frac 34$.
		Hence, $k\ge 14$, or $u_1$ is $(12/13, 7)_{H G}$.
		\begin{itemize}
		\parskip=0pt
		\item
			$u_2$ is $(12, 5/6)_{H G}$.
			If $d(u_1) = \Delta\ge 13$, then $n'_{5^-}(u_1)\le d(u_2)- 7\le 5$, and thus $u_1$ is $(13^+, (k - 5)^+)$.
			Otherwise, $u_1$ is $(12, 7)_{H G}$, and thus ($A_{3.8}$) is obtained, a contradiction.
		\item
			$u_2$ is $(13, 5/6)_{H G}$.
			If $d(u_1) = \Delta\ge 14$, then $n'_{5^-}(u_1)\le d(u_2)- 7\le 6$, and thus $u_1$ is $(14^+, (k - 6)^+)$.
			Otherwise, $u_1$ is $(13, 7)_{H G}$, and thus ($A_{3.8}$) is obtained, a contradiction.
		\end{itemize}
	\end{itemize}
\end{itemize}

(Case 4.1.2.) None of $\{u_1, u_2\}$ is $([9, 13], 6^-)_{H G}$ and $u_2$ is $(14, 7^-)_H$.
In this case, $d_H(u_1)\ge 14$ or $u_1$ is $([9, 13], 7^+)_H$.
It follows from Lemma~\ref{lemma04properties}(1) and (3) that $\alpha_{u_1}\ge \frac 57$ and $\alpha_{u_2}\ge \frac 57$.
If $u_3$ is $(9, 5)$, then Eq.~(\ref{eq04}) becomes
$\sum_{x\in \{v\}\cup F_3(v)}\omega'(x) 
 \le 4 - 2 (\frac 57 + \frac 57 + \frac 7{12})= - \frac 1{42} < 0$.
Otherwise, $u_3$ is $(9, 4^-)$.
Note that $2 - (\alpha_{u_3} + \alpha_{u_2}) = 2 - (\frac 59 +  \frac 57) < \frac {11}{15}$.
One sees that if $d_H(u_1)\ge 15$, or $u_1$ is $(13/14, 8^+)_H$, or $u_1$ is $([9, 12], 7^+)_H$,
then $\alpha_{u_1}\ge \frac {11}{15}$ and thus Eq.~(\ref{eq04}) becomes $\sum_{x\in \{v\}\cup F_3(v)}\omega'(x) < 0$.
In the other case, $u_1$ is $(13, 7)_H$ or $(14, 7^-)_H$, either of which leads to a structure ($A_{3.5}$), a contradiction.

{Case 4.2.} None of $\{u_1, u_2, u_3\}$ is $(9, 5^-)$ and $u_3$ is $(10, 5^-)$.
In this case, for each $i\in [1, 3]$, $\alpha_{u_i}\ge \frac 35$.
When $\alpha_{u_i}\ge \frac 7{10}$ for each $i\in [1, 2]$, Eq.~(\ref{eq04}) becomes
$\sum_{x\in \{v\}\cup F_3(v)}\omega'(x) \le 4 - 2 (\frac 7{10} + \frac 7{10} + \frac 35) = 0$.
When there exits an $i_1\in [1, 2]$ such that $\alpha_{u_{i_1}}< \frac 7{10}$,
by Lemma~\ref{lemma04properties}(1) and (3), $u_{i_1}$ is $(9, 6)_{HG}$ or $([10, 13], 6^-)_{H G}$.
Assuming w.l.o.g. $i_1 = 2$.
The non-existence of ($A_{4.1}$) in $G$ implies that $u_2$ is $(9, 6)_{H G}$ or $([10, 13], 5/6)_{H G}$;
the non-existence of ($A_{4.2}$) in $G$ implies that $n'_{6^+}(u_1) = n_{6^+}(u_1)\ge \Delta - d(u_2) + 7\ge 7$ and $n'_{5^-}(u_1)\le d(u_2) - 7$.
It follows from Lemma~\ref{lemma04properties} that $\alpha_{u_1}\ge \frac 57$.
Recall that $\alpha_{u_3}\ge \frac 35$, and $2 - (\alpha_{u_3} + \alpha_{u_1}) = 2 - (\frac 35 +  \frac 57) < \frac 9{13}$.
If $d_H(u_1)\ge 13$, then $\alpha_{u_1}\ge \frac 9{13}$ and thus Eq.~(\ref{eq04}) becomes $\sum_{x\in \{v\}\cup F_3(v)}\omega'(x) < 0$.
In the other case, $u_1$ is $(9, 6)_{H G}$ or $([10, 12], 5/6)_{H G}$, and we discuss them separately below:

({Case 4.2.1}) $u_2$ is $([9, 11], 6)$ or $(12, 5/6)$.
In this case, $\alpha_{u_2}= \frac 23$.
When $d_H(u_1)\ge 15$, or $u_1$ is $(12^-, 7^+)_H$ or $(13/14, 8^+)_H$, $\alpha_{u_1}\ge \frac {11}{15}$, and thus Eq.~(\ref{eq04}) becomes
$\sum_{x\in \{v\}\cup F_3(v)}\omega'(x) \le 4 - 2(\frac {11}{15} + \frac 23 + \frac 35) = 0$.
In the other case, $u_1$ is $(13/14, 7)_H$ and thus $\Delta\ge d_H(u_1)\ge 13$.
It follows that $7 \ge n'_{6^+}(u_1)\ge \Delta - d(u_2) + 7\ge 13 - 12 + 7 = 8$, a contradiction.

({Case 4.2.2}) $u_2$ is $(10/11, 5)$.
The non-existence of ($A_{4.3}$) in $G$ implies that $d_H(u_1) = d(u_1) = \Delta$.
Hence, $u_1$ is $(\Delta, (\Delta - d(u_2) + 7)^+){H G}$ where $\Delta\ge d(u_1)\ge 10$.
When $u_1$ is $(10, 8^+)_{H G}$, $(11^+, (k - 3)^+)_{H G}$, or $(12^+, (k - 4)^+)_{H G}$, we have $\alpha_{u_1}\ge \frac{4}{5}$,
and thus Eq.~(\ref{eq04}) becomes
$\sum_{x\in \{v\}\cup F_3(v)}\omega'(x) \le 4 - 2(\frac 45 + \frac 35 + \frac 35) = 0$.
In the other case, $d_H(u_1) = d(u_1) = k \ge 12$ or $u_1$ is $(10/11, 7)_{H G}$.
We discuss them separately below to derive contradictions:

\begin{itemize}
\parskip=0pt
\item[{\rm (1)}]
	$u_2$ is $(10, 5)$.
	If $d(u_1) = \Delta\ge 11$, then $n'_{5^-}(u_1)\le d(u_2)- 7\le 3$;
	thus $u_1$ is $(11^+, (k - 3)^+)_{H G}$, a contradiction.
	Otherwise, $u_1$ is $(10, 7)_{H G}$, and thus ($A_{4.4}$) is obtained, a contradiction too.
\item[{\rm (2)}]
	$u_2$ is $(11, 5)$.
	If $d(u_1) = \Delta\ge 12$, then $n'_{5^-}(u_1)\le d(u_2)- 7\le 4$;
	thus $u_1$ is $(12^+, (k - 4)^+)_{H G}$, a contradiction.
	Otherwise, $u_1$ is $(11, 7)_{H G}$, and thus ($A_{4.4}$) is obtained, a contradiction too.
\end{itemize}

{Case 4.3.} None of $\{u_1, u_2, u_3\}$ is $(9/10, 5^-)$ and $u_3$ is $(11, 5^-)$.
In this case, for each $i\in [1, 3]$, $\alpha_{u_i}\ge \frac 7{11}$.
When $\alpha_{u_i}\ge \frac {15}{22}$ for each $i\in [1, 2]$, Eq.~(\ref{eq04}) becomes
$\sum_{x\in \{v\}\cup F_3(v)}\omega'(x) \le 4 - 2(\frac {15}{22} + \frac {15}{22} + \frac 7{11}) = 0$.
In the other case, there exits an $i_1\in [1, 2]$ such that $\alpha_{u_{i_1}}< \frac{15}{22} < \frac{9}{13}$.
By Lemma~\ref{lemma04properties}(1) and (3), $u_{i_2}$ is $([9, 12], 6)_{H G}$, or $(12, 5^-)_{H G}$, or $(11, 5^-)_{H G}$.
Assume w.l.o.g. $i_2 = 2$.
We have the following two scenarios:

({Case 4.3.1}) $u_2$ is $([9, 12], 6)_{H G}$, or $(12, 5^-)_{H G}$.
In this case, $\alpha_{u_2} = \frac 23$.
When $\alpha_{u_1}\ge \frac {23}{33}$, Eq.~(\ref{eq04}) becomes
$\sum_{x\in \{v\}\cup F_3(v)}\omega'(x) \le 4 - 2(\frac {23}{33} + \frac 23 + \frac 7{11}) = 0$.
In the other case, $\alpha_{u_1} < \frac{23}{33} < \frac 57$.
One sees that if $d_H(u_1)\ge 14$ or $n'_{6^+}(u_1)\ge 7$ by Lemma~\ref{lemma04properties}(1) and (3),
then $\alpha_{u_1}\ge \frac 57$ and thus Eq.~(\ref{eq04}) becomes $\sum_{x\in \{v\}\cup F_3(v)}\omega'(x) < 0$.
It follows that $n_{6^+}(u_1) = n'_{6^+}(u_1)\le 6$, and $u_1$ is $([9, 13], 6)_{H G}$, or $([11, 14], 5^-)$;
but then ($A_{5.1}$) is obtained, a contradiction.

({Case 4.4.2}) None of $\{u_1, u_2\}$ is $([9, 12], 6)_{H G}$ or $(12, 5^-)_{H G}$, and $u_2$ is $(11, 5^-)$.
In this case, $\alpha_{u_2} = \frac 23$.
When $d_H(u_1)\ge 15$, or $u_1$ is $(12^-, 7^+)_H$, or $(13/14, 8^+)_H$, we have $\alpha_{u_1}\ge \frac 8{11}$,
and thus Eq.~(\ref{eq04}) becomes
$\sum_{x\in \{v\}\cup F_3(v)}\omega'(x) \le 4 - 2(\frac 8{11} + \frac 7{11} + \frac 7{11}) = 0$.
In the other case, $d_H(u_1)\le 14$, and $n_{6^+}(u_1) = n'_{6^+}(u_1)\le 6$ or $u_1$ is $(13/14, 7)_H$.
Recall that $d(u_1) = d_H(u_1)$ if $n'_{6^+}(u_1)\le 6$.
The non-existence of ($A_{5.1}$) in $G$ implies that $u_1$ is $(13/14, 7)_H$, that is, $u_1$ is $(13^+, 7)$ in $G$,
which leads to a structure ($A_{5.2}$), a contradiction.
This finishes all cases, and thus proves the lemma.
\end{proof}

\begin{demma}
\label{demma04}
For every vertex $u\in V(H)$ with $d_H(u) = d(u) = 4$, we have
\begin{itemize}
\parskip=0pt
\item[{\rm (1)}]
	for each $6^+$-vertex $x$ in $N_H(u)$, the total weight transferred from $u$ and the $3$-faces in $F(u x)$ to $x$ is $(1 + \frac 12 m_3(u x))\alpha_{x}$;
\item[{\rm (2)}]
	if $u$ is adjacent to some $4$-vertex $v$ in $H$, then $\sum_{x\in \{u, v\}\cup F_3(u)\cup F_3(v)}\omega'(x)\le 0$;
\item[{\rm (3)}]
	if $d_H(v)\ge 5$ for each $v\in N_H(u)$, then $\sum_{x\in \{u\}\cup F_3(u)}\omega'(x)\le 0$.
\end{itemize}
\end{demma}
\begin{proof}
For each $x\in N_H(u)$ with $d_H(x)\ge 6$, let $\Gamma(u \rightarrow x)$ be the total weight transferred from $u$ and the $3$-faces in $F(u x)$ to $x$.
To prove (1), we discuss the following three subcases for the three possible values of $m_3(u x)$.
\begin{itemize}
\parskip=0pt
\item
	$m_3(u x) = 0$. 
	By (R1), $\tau(u \rightarrow x)_f = \tau(u \rightarrow x)_{f_{u x}} = \frac 12 \times \alpha_{x}$;
	and thus $\Gamma(u \rightarrow x) = \tau(u \rightarrow x)_f + \tau(u \rightarrow x)_{f_{u x}} = \alpha_{x}$.
\item
	$m_3(u x) = 1$ with $d(f_3) = 3$. 
	By (R3.1) and (R3.2), $\tau(f \rightarrow x)_f = \alpha_{x}$, $\tau(f \rightarrow x)_{f_{u x}} = \frac 12 \times \alpha_{x}$;
	and thus $\Gamma(u \rightarrow x) = \tau(f \rightarrow x)_f + \tau(f \rightarrow x)_{f_{u x}} = \alpha_{x} + \frac 12 \times \alpha_{x} = \frac 32\alpha_{x}$.
\item
	$m_3(u x) = 2$. 
	By (R3.1), $\tau(f \rightarrow x)_f = \tau(f_{ux} \rightarrow x)_{f_{u x}} = \alpha_{x}$;
	and thus $\Gamma(u \rightarrow x) = \tau(f \rightarrow x)_f + \tau(f_{ux} \rightarrow x)_{f_{u x}} = \alpha_{x} + \alpha_{x} = 2\alpha_{x}$.
\end{itemize}
Hence, the total weight transferred from $u$ and the $3$-faces in $F(u x)$ to $x$ is $(1 + \frac 12 m_3(u x))\alpha_{x}$.

Now let $u_1, u_2, v, u_3 $ be all the neighbors of $u$ in clockwise order,
and denote the face containing $u_1 u u_2$ ($u_2 u v$, $v u u_3$, $u_3 u u_1$, respectively) as $f_1$ ($f_2, f_3, f_4$, respectively).
Note that
\begin{equation}
\label{eq05}
\sum_{x\in \{u, v\}\cup F_3(u)\cup F_3(v)}\omega(x)
= \sum_{x\in F_3(u)\cup F_3(v)}\omega(x) = m_3(u) + m_3(v) - m_3(uv).
\end{equation}

To prove (2), we have $d(v) = d_H(v) = 4$ and let $v_1, v_2, v_3 $ be the other neighbors of $v$ in clockwise order,
and denote the face containing $v_1 v v_2$ ($v_2 v v_3$, respectively) as $h_1$ ($h_2$, respectively).

For each $x\in N_H(u)\cup N_H(v)$, it follows from Lemma~\ref{lemma02} that $d(x)\ge 10$ and $d_H(x) \ge 8$;
and then by Lemma~\ref{lemma04properties}(1) and (4), we have $\alpha_x\ge \frac 35$.
Using (R1), (R3.1) and (R3.1), we discuss the following three subcases for the three possible values of $m_3(uv)$.

{Case 1.} $m_3(uv) = 0$.
In this case, $m_3(u)\le 2$ and $m_3(v)\le 2$.
It follows from Eq.~(\ref{eq05}) that
\begin{equation}
\label{eq06}
\begin{aligned}
\sum_{x\in \{u, v\}\cup (F_3(u)\cup F_3(v))}\omega'(x)
& = \sum_{x\in \{u, v\}\cup (F_3(u)\cup F_3(v))}\omega(x) - \sum_{x\in \{u_1, u_2, u_3, v_1, v_2, v_3\}}\Gamma(u \rightarrow x)\\
& = m_3(u)+ m_3(v) - \sum_{x\in \{u_1, u_2, u_3, v_1, v_2, v_3\}}\Gamma(u \rightarrow x).
\end{aligned}
\end{equation}
By (1), $\sum_{x\in N_H(u)\setminus \{v\}}\Gamma(u \rightarrow x)\ge \alpha_{u_1} + \alpha_{u_2} + \alpha_{u_3}$,
$\sum_{x\in N_H(v)\setminus \{u\}}\Gamma(u \rightarrow x)\ge \alpha_{v_1} + \alpha_{v_2} + \alpha_{v_3}$.

When $m_3(u)+ m_3(v)\le 3$, Eq.~(\ref{eq06}) becomes
$\sum_{x\in \{u, v\}\cup (F_3(u)\cup F_3(v))}\omega'(x)
 \le 3 - (\alpha_{u_1} + \alpha_{u_2} + \alpha_{u_3} + \alpha_{v_1} + \alpha_{v_2} + \alpha_{v_3}) \le 3 - 6\times \frac 35 = - \frac 35 < 0$.

In the other case, $m_3(u)+ m_3(v) = 4$, i.e., $m_3(u) = m_3(v)= 2$.
Then by (1), $\sum_{x\in N_H(u)\setminus \{v\}}\Gamma(u \rightarrow x)\ge 2\alpha_{u_1} + \alpha_{u_2} + \alpha_{u_3}$,
$\sum_{x\in N_H(v)\setminus \{u\}}\Gamma(u \rightarrow x)\ge \alpha_{v_1} + 2\alpha_{v_2} + \alpha_{v_3}$, and Eq.~(\ref{eq06}) becomes
$\sum_{x\in \{u, v\}\cup (F_3(u)\cup F_3(v))}\omega'(x)
 \le 4 - (2\alpha_{u_1} + \alpha_{u_2} + \alpha_{u_3} + \alpha_{v_1} + 2\alpha_{v_2} + \alpha_{v_3}) \le 4 - 8\times \frac 35 = - \frac 45< 0$.

{Case 2.} $m_3(u v) = 1$ with $d(f_2) = 3$.
In this case, $u_2 = v_1$.
Let $\Gamma(u/v \rightarrow u_2)$ be the total weight transferred from $u$, $v$ and  the $3$-faces in $F(u u_2)\cup F(v u_2)$ to $u_2$.
If $d(f_1) = d(h_1) = 3$,
then $\Gamma(u/v \rightarrow u_2) = \tau(f_1 \rightarrow u_2)_{f_1} + \tau(f_2 \rightarrow u_2)_{f_2} + \tau(h_1 \rightarrow u_2)_{h_1} = 3\alpha_{u_2}$;
if $d(f_1), d(h_1)\ge 4$,
then $\Gamma(u/v \rightarrow u_2) = \tau(f_2 \rightarrow u_2)_{f_1} + \tau(f_2 \rightarrow u_2)_{f_2} + \tau(f_2 \rightarrow u_2)_{h_1} = 2\alpha_{u_2}$;
otherwise, only one of $\{f_1, h_1\}$ is a $3$-face, assuming w.l.o.g. $d(f_1) = 3$,
thus $\Gamma(u/v \rightarrow u_2) = \tau(f_1 \rightarrow u_2)_{f_1} + \tau(f_2 \rightarrow u_2)_{f_2} + \tau(f_2 \rightarrow u_2)_{h_1}= \frac 52\alpha_{u_2}$.
That is,
\begin{equation}
\label{eq07}
\begin{aligned}
\Gamma(u/v \rightarrow u_2) = \begin{cases}
	3\alpha_{u_2},				&\mbox{ if } d(f_1) = d(h_1)= 3;\\
	2\alpha_{u_2},				&\mbox{ if } d(f_1), d(h_1)\ge 4;\\
	\frac 52\alpha_{u_2},	&\mbox{ otherwise}.\end{cases}
\end{aligned}
\end{equation}
By (1) and Eq.~(\ref{eq05}), $\sum_{x\in \{u_1, u_3\}}\Gamma(u \rightarrow x)\ge \alpha_{u_1} + \alpha_{u_3}$,
$\sum_{x\in \{v_2, v_3\}}\Gamma(u \rightarrow x)\ge \alpha_{v_2} + \alpha_{v_3}$, and
\begin{equation}
\label{eq08}
\sum_{x\in \{u, v\}\cup F_3(u)\cup F_3(v)}\omega'(x) = m_3(u)+ m_3(v)- 1 - \sum_{x\in \{u_1, u_3, v_2, v_3\}}\Gamma(u \rightarrow x) - \Gamma(u/v \rightarrow u_2).
\end{equation}

When $m_3(u)+ m_3(v)\le 4$, Eq.~(\ref{eq08}) becomes
$\sum_{x\in \{u, v\}\cup F_3(u)\cup F_3(v)}\omega'(x)
 \le 3 - (\alpha_{u_1} + \alpha_{u_3} + \alpha_{v_2} + \alpha_{v_3} + 2\alpha_{u_2}) \le 3 - 6\times \frac 35 = - \frac 35< 0$.

In the other case, $5\le m_3(u)+ m_3(v)\le 6$.
It follows that $\max\{m_3(u), m_3(v)\}\ge 3$.
Assume w.l.o.g. $m_3(u) = 3$, i.e., $d(f_1) = d(f_4) = 3$, and $m_3(v) \ge 2$.
Then by (1) and Eq.~(\ref{eq07}), $\sum_{x\in \{u_1, u_3\}}\Gamma(u \rightarrow x)\ge 2\alpha_{u_1} + \frac 32 \alpha_{u_3}$,
$\Gamma(u/v \rightarrow u_2) \ge 52 \alpha_{u_2}$, $\sum_{x\in \{v_2, v_3\}}\Gamma(u \rightarrow x)\ge \frac 32 \alpha_{v_2} + \alpha_{v_3}$,
and hence Eq.~(\ref{eq08}) becomes
$\sum_{x\in \{u, v\}\cup F_3(u)\cup F_3(v)}\omega'(x)
 \le 6 - 1 - (2\alpha_{u_1} + \frac 32 \alpha_{u_3} + \frac 32 \alpha_{v_2} + \alpha_{v_3} + \frac 52 \alpha_{u_2}) \le 5 - \frac {17}2\times \frac 35 = - \frac 1{10}< 0$.

{Case 3.} $m_3(u v) = 2$.
In this case, $u_2 = v_1$ and $u_3 = v_3$.
Let $\Gamma(u/v \rightarrow u_3)$ be the total weight transferred from $u$, $v$ and the $3$-faces in $F(u u_2)\cup F(v u_3)$ to $u_3$.
We have the following similar to Eq.~(\ref{eq07}):
\begin{equation}
\label{eq09}
\begin{aligned}
\Gamma(u/v \rightarrow u_3) = \begin{cases}
	3\alpha_{u_3},				&\mbox{ if } d(f_4) = d(h_2) = 3;\\
	2\alpha_{u_3},				&\mbox{ if } d(f_4), d(h_2)\ge 4;\\
	\frac 52 \alpha_{u_3},	&\mbox{ otherwise}.\end{cases}
\end{aligned}
\end{equation}
By (1), $\Gamma(u \rightarrow u_1)= (1 + \frac{1}{2}m_3(uu_1))\alpha_{u_1}$, $\Gamma(u \rightarrow v_2) = (1 + \frac 12 m_3(v v_2))\alpha_{v_2}$.
Since $m_3(u) + m_3(v) - m_3(u v) = m_3(u u_1) + 2 + m_3(v v_2) + 2 - 2 =  m_3(u u_1) + 2 + m_3(v v_2)$, we have
\begin{equation}
\label{eq10}
\begin{aligned}
	&\sum_{x\in \{u, v\}\cup (F_3(u)\cup F_3(v))}\omega'(x)
 = m_3(u) + m_3(v) -2 - \sum_{x\in \{u_1, v_2\}}\Gamma(u \rightarrow x) - \sum_{y\in \{u_2, u_3\}}\Gamma(u/v \rightarrow y)\\
= \	&(1-\frac 12 \alpha_{u_1})\times m_3(u u_1) + (1-\frac 12 \alpha_{v_2})\times m_3(v v_2) + \alpha_{u_1} + \alpha_{v_2}
	+ 2 - \sum_{y\in \{u_2, u_3\}}\Gamma(u/v \rightarrow y)
\end{aligned}
\end{equation}

When $m_3(u u_1) + m_3(v v_2)\le 2$, Eq.~(\ref{eq10}) becomes
$\sum_{x\in \{u, v\}\cup (F_3(u)\cup F_3(v))}\omega'(x)
 \le (1-\frac 12 \alpha_{u_1})\times m_3(u u_1) + (1-\frac 12 \alpha_{v_2})\times m_3(v v_2) + 2 - \alpha_{u_1} - \alpha_{v_2}- 2\alpha_{u_2}  - 2\alpha_{u_3}
 \le (1 - \frac 12 \times \frac 35)\times 2 + 2 - 6\times \frac 35 = - \frac 15< 0$.

In the other case, $3\le m_3(u u_1) + m_3(v v_2) \le 4$.
It follows that $\max\{m_3(u u_1), m_3(v v_2)\} \ge 2$.
Assume w.l.o.g. $m_3(u u_1) = 2$, $m_3(v v_2) \ge 1$ with $d(h_1) = 3$.
Then $\Gamma(u/v \rightarrow u_2) = 3\alpha_{u_2}$.
One sees that if $d(h_2)\ge 4$, $\Gamma(u/v \rightarrow u_3) = \frac 52 \alpha_{u_3}$;
if $d(h_2)= 3$, $\Gamma(u/v \rightarrow u_3) = 3\alpha_{u_3}$, i.e., $\Gamma(u/v \rightarrow u_3) = (\frac 52 + \frac 12 \times (m_3(v v_2) - 1))\alpha_{u_3}$.
Hence, Eq.~(\ref{eq10}) becomes
\begin{equation*}
\begin{aligned}
		&\ \sum_{x\in \{u, v\}\cup (F_3(u)\cup F_3(v))}\omega'(x)\\
\le 	&\ (1-\frac 12 \alpha_{u_1})\times 2 + (1-\frac 12 \alpha_{v_2})\times m_3(v v_2) + 2 - \alpha_{u_1} - \alpha_{v_2}
			- 3\alpha_{u_2} - (\frac{5}{2} + \frac{1}{2}\times (m_3(vv_2) - 1))\times \alpha_{u_3}\\
= 		&\ 4 - 2\alpha_{u_1}- \alpha_{v_2} - 3\alpha_{u_2} - 2\alpha_{u_3} + (1 - \frac 12 \alpha_{v_2} - \frac 12 \alpha_{u_3}) m_3(v v_2)\\
\le 	&\ 4 - 8\times \frac 35 + (1 - \frac 35)m_3(v v_2)= - \frac 45 + (1 - \frac 35)m_3(v v_2)\\
=		&\ - \frac 45 + \frac 25 \times 2 = 0.
\end{aligned}
\end{equation*}

To prove (3), we have
\[
\sum_{x\in \{u\}\cup F_3(u)}\omega(x) = \omega(u) + \sum_{x\in F_3(u)}\omega(x) = \sum_{x\in F_3(u)} 1 = m_3(u) = m_3(uu_1) + m_3(uv)= m_3(uu_2) + m_3(uu_3).
\]
Using (R1), (R3.1), (R3.2), (R4.1) and (R4.2), we discuss the following two cases for the minimum degree of vertices in $N_H(u)$.

Case 1. For each $x\in N_H(u)$, $d_H(x)\ge 6$.
In this case, for each $x\in \{u_1, u_2, u_3\}$, $\alpha_x\ge \frac 13$,
and by (1), $\Gamma(u \rightarrow x) = (1 + \frac 12 m_3(ux))\alpha_{x}$,
$\Gamma(u \rightarrow v) = (1 + \frac 12 m_3(uv))\alpha_{v}$,
and thus
\begin{equation}
\label{eq11}
\sum_{x\in \{u\}\cup F_3(u)}\omega'(x) = m_3(u)- \sum_{x\in N_H(x)}\Gamma(u \rightarrow x) = m_3(u)- \sum_{x\in N_H(x)}(1 + \frac 12 m_3(ux))\alpha_{x}.
\end{equation}
One sees from Lemma~\ref{lemma04properties}(1) and (4) that if $\alpha_x < \frac 12$, then $d(x) = d_H(x)\in [6, 7]$;
furthermore, by the definition of $\alpha_x$ in Eq.~(\ref{eq02}), either $d(x) = d_H(x) = 6$, or $x$ is $(7, 4^-)_{H G}$.
We separate the discussion into the following three scenarios:

{Case 1.1.} For each $x\in N_H(x)$, $\alpha_x \ge \frac 12$.
When $m_3(u) = m_3(u u_1) + m_3(u v) \le 2$, Eq.~(\ref{eq11}) becomes
$\sum_{x\in \{u\}\cup F_3(u)}\omega'(x)
 \le m_3(u u_1) + m_3(u v) - (1 + \frac 12 m_3(u u_1))\alpha_{u_1} - \alpha_{u_2}- \alpha_{u_3} - (1 + \frac 12 m_3(u v))\alpha_{v} \le -\frac 12< 0$.
In the other case,  $3\le m_3(u) = m_3(u u_1) + m_3(u v)\le 4$, and we assume w.l.o.g. $m_3(u u_1) = 2$;
it follows that $m_3(u v)\ge 1$ and $d(f_2) = 3$.
Furthermore, we have $m_3(u u_2)= 2$, $m_3(u u_3)\ge 1$, and thus $m_3(u) = 3 + m_3(u u_3) - 1 = m_3(u u_3) + 2$ and
$(1 + \frac 12 m_3(u v))\alpha_{v} = (1 + \frac 12 m_3(u u_3))\alpha_{v}$.
This way, Eq.~(\ref{eq11}) becomes
$\sum_{x\in \{u\}\cup F_3(u)}\omega'(x)
 = m_3(u u_3) + 2 - 2\alpha_{u_1} - 2\alpha_{u_2}- (1 + \frac 12 m_3(u u_3))\alpha_{u_3} - (1 + \frac 12 m_3(u u_3))\alpha_{v} \le 0$.

{Case 1.2.} $d(v) = d_H(v) = 6$ and $d_H(x)\ge 6$ for each $x\in \{u_1, u_2, u_3\}$.
Recall that for each $x\in N_H(x)$, $\Gamma(u \rightarrow x)= (1 + \frac 12 m_3(u x))\alpha_{x}$ and $\alpha_x\ge \frac 13$.
When $m_3(u)\le 1$, Eq.~(\ref{eq11}) becomes
$\sum_{x\in \{u\}\cup F_3(u)}\omega'(x) \le 1- \sum_{x\in N_H(x)}\alpha_x = 1 - 4\times \frac 13 = - \frac 13< 0$.

In the other case, $m_3(u) \ge 2$.
If $m_3(u v) = 0$, i.e., $d(f_1) = d(f_4) = 3$, then $m_3(u) = 2$, and thus Eq.~(\ref{eq11}) becomes
$\sum_{x\in \{u\}\cup F_3(u)}\omega'(x) \le 2 - (2\alpha_{u_1} +  \frac 32 \alpha_{u_2} + \frac 32 \alpha_{u_3} + \alpha_v) = 2 - 6\times \frac 13 = 0$.
If $m_3(uv) \ge 1$ (and thus $d(f_2)= 3$), it follows from Lemma~\ref{lemma02}(2.3) that for each $x\in \{u_1, u_2, u_3\}$, $d_H(x) \ge 8$ and $d(x)\ge 8$;
and thus, by Lemma~\ref{lemma04properties}(1) and (4), $\alpha_x\ge \frac 12$.
Therefore, when $m_3(u) = 2$, Eq.~(\ref{eq11}) becomes
$\sum_{x\in \{u\}\cup F_3(u)}\omega'(x)
 \le 2 - (\alpha_{u_1} + \frac 32 \alpha_{u_2} + \alpha_{u_3} + \frac 32 \alpha_v) = 2 - \frac 72\times \frac 12 - \frac 32 \times \frac 13 = -\frac 14< 0$.
In the other case, $m_3(u) \ge 3$.
If $m_3(u v) = 1$, i.e., $d(f_1) = d(f_4) = 3$, then $m_3(u) = 3$, and Eq.~(\ref{eq11}) becomes
$\sum_{x\in \{u\}\cup F_3(u)}\omega'(x)
 \le 3 - (2\alpha_{u_1} +  2 \alpha_{u_2} + \frac 32 \alpha_{u_3} + \frac 32 \alpha_v) = 3 - \frac {11}2 \times \frac 12 - \frac 32 \times \frac 13  = - \frac 12< 0$.
If $m_3(u v) = 2$, then it follows from Lemma~\ref{lemma02}(2.5) that for each $x\in \{u_1, u_2, u_3\}$, $d_H(x)\ge 8$ and $d(x)\ge 9$;
and thus, by Lemma~\ref{lemma04properties}(1) and (4), $\alpha_x\ge \frac 59$.
Since $m_3(u) \ge 3$, assuming w.l.o.g. $d(f_1) = 3$, we have $m_3(u) = m_3(u u_3) + 2 = m_3(u u_1) + 2$, and Eq.~(\ref{eq11}) becomes
$\sum_{x\in \{u\}\cup F_3(u)}\omega'(x) = m_3(u u_1) + 2 - ((1 + \frac 12 m_3(u u_1))\alpha_{u_1} + 2\alpha_{u_2} + (1 + \frac 12 m_3(u u_1))\alpha_{u_3} + 2\alpha_v)
\le 2 +  \frac 49 \times m_3(u u_1) - \frac{26}{9}
\le 2 +  \frac 49 \times 2 - \frac{26}{9} = 0$.

{Case 1.3.} $v$ is $(7, 4^-)$ and $d_H(x)\ge 7$ for each $x\in \{u_1, u_2, u_3\}$.
Recall that for each $x\in N_H(x)$, $\Gamma(u \rightarrow x)= (1 + \frac 12 m_3(u x))\alpha_{x}$ and $\alpha_x\ge \frac 37$.
When $m_3(u)\le 1$, Eq.~(\ref{eq11}) becomes
$\sum_{x\in \{u\}\cup F_3(u)}\omega'(x) \le 1- \sum_{x\in N_H(x)}\alpha_x = 1 - 4\times \frac 37 = - \frac 57< 0$.

In the other case, $m_3(u)\ge 2$.
If $m_3(uv) = 0$, i.e., $d(f_1) = d(f_4) = 3$, then $m_3(u) = 2$, and Eq.~(\ref{eq11}) becomes
$\sum_{x\in \{u\}\cup F_3(u)}\omega'(x) \le 2 - (2\alpha_{u_1} +  \frac 32 \alpha_{u_2} + \frac 32 \alpha_{u_3} + \alpha_v) = 2 - 6\times \frac 37 = -\frac 47< 0$.
If $m_3(uv) \ge 1$ (and thus $d(f_2) = 3$), then when $m_3(u) \le 2$, Eq.~(\ref{eq11}) becomes
$\sum_{x\in \{u\}\cup F_3(u)}\omega'(x) \le 2 - (\alpha_{u_1} + \frac 32 \alpha_{u_2} + \alpha_{u_3} + \frac 32 \alpha_v) = 2 - 5\times \frac 37 = -\frac 17< 0$;
when $m_3(u) = 3$, no matter which two of $\{f_1, f_4, f_3\}$ are $3$-faces, Eq.~(\ref{eq11}) becomes $\sum_{x\in \{u\}\cup F_3(u)}\omega'(x) \le 3 - 7\times \frac 37 = 0$;
when $m_3(u) = 4$, it follows from Lemma~\ref{lemma02}(2.6) that for each $x\in \{u_1, u_2, u_3\}$, $d_H(x)\ge 8$ and $d(x)\ge 9$,
and thus, by Lemma~\ref{lemma04properties}(1) and (4), $\alpha_x\ge \frac 59$, and Eq.~(\ref{eq11}) becomes
$\sum_{x\in \{u\}\cup F_3(u)}\omega'(x) = 4 - (2\alpha_{u_1} + 2\alpha_{u_2} + 2\alpha_{u_3} + 2\alpha_v) \le 4 - 6\times \frac 59 - 2\times \frac 37 = \frac 4 {21}< 0$.

Case 2. For some $x\in N_H(u)$, $d(x) = d_H(x)= 5$, assuming w.l.o.g. $d(v) = d_H(v) = 5$.
In this case, for each $x\in \{u_1, u_2, u_3\}$, it follows from Lemma~\ref{lemma02} (2.2) that $d_H(x) \ge 8$ and $d_(x) \ge 9$;
and then by Lemma~\ref{lemma04properties}(1) and (4), $\alpha_x\ge \frac 59$.
Recall that for each $x\in \{u_1, u_2, u_3\}$, $\Gamma(u \rightarrow x) = (1 + \frac 12 m_3(u x))\alpha_{x}$,
and if $m_3(u) \ne 4$ then $\Gamma(u \rightarrow v)= 0$.
When $m_3(u) \le 1$, Eq.~(\ref{eq11}) becomes
$\sum_{x\in \{u\}\cup F_3(u)}\omega'(x) \le 1- \sum_{x\in N_H(x)}\alpha_x = 1 - 3\times \frac 59 = - \frac 23< 0$.

In the other case, $m_3(u) \ge 2$.
If $m_3(uv) = 0$, assuming w.l.o.g. $d(f_1) = 3$,
then $m_3(u) = m_3(u u_1) = m_3(u u_3) + 1$, $\Gamma(u \rightarrow u_2) = \frac 32 \alpha_{u_2}$,
$\Gamma(u \rightarrow u_1)= (1 + \frac 12 m_3(uu_1))\alpha_{u_1}$,
$\Gamma(u \rightarrow u_3)= (1 + \frac 12 (m_3(uu_1) - 1))\alpha_{u_3}$,
and thus Eq.~(\ref{eq11}) becomes
$\sum_{x\in \{u\}\cup F_3(u)}\omega'(x) = m_3(uu_1) - (\frac 32 \alpha_{u_2} +  (1 + \frac 12 m_3(uu_1))\alpha_{u_1} + (1 + \frac 12 (m_3(uu_1) - 1))\alpha_{u_3})
 = (1 - \frac 12\alpha_{u_1} - \frac 12 \alpha_{u_3})m_3(uu_1) - (\frac 32 \alpha_{u_2} + \alpha_{u_1} + \frac 12 \alpha_{u_3})
 \le (1 - \frac 59)\times m_3(uu_1) - 3\times \frac 59 = -\frac 79 < 0$.

If $m_3(uv) \ge 1$ (and thus $d(f_2)= 3$), then when $m_3(u v)\le 1$, at least one of $\{f_1, f_4\}$ is a $3$-face,
and we discuss the following three subcases for three possible values of $m_3(u)$:
\begin{itemize}
\parskip=0pt
\item
	$d(f_1) = d(f_4) = 3$.
	Then Eq.~(\ref{eq11}) becomes $\sum_{x\in \{u\}\cup F_3(u)}\omega'(x)
	 = 3 - (2\alpha_{u_1} + 2\alpha_{u_2} + \frac 32\alpha_{u_3}) \le 3 - \frac {11} 2 \times \frac 59 = - \frac 1{18}< 0$.
\item
	$d(f_1) = 3$ and $d(f_4)\ge 4$.
	Then Eq.~(\ref{eq11}) becomes $\sum_{x\in \{u\}\cup F_3(u)}\omega'(x)
	 = 2 - (\frac 32 \alpha_{u_1} + 2\alpha_{u_2} + \alpha_{u_3}) \le 2 - \frac {9} 2\times \frac 59 = - \frac 12< 0$.
\item
	$d(f_1) \ge 4$ and $d(f_4) = 3$.
	Then Eq.~(\ref{eq11}) becomes $\sum_{x\in \{u\}\cup F_3(u)}\omega'(x)
	 = 2 - (\frac 32 \alpha_{u_1} + \frac 32 \alpha_{u_2} + \frac 32 \alpha_{u_3}) \le 2 - \frac {9} 2\times \frac 59 < 0$.
\end{itemize}
When $m_3(u v) = 2$, i.e., $d(f_3) = d(f_2) = 3$, it follows from Lemma~\ref{lemma02}(2.4) that for each $x\in \{u_1, u_2, u_3\}$, $d_H(x)\ge 8$ and $d(x)\ge 10$;
and thus, by Lemma~\ref{lemma04properties}(1) and (4), $\alpha_x\ge \frac 35$.
We discuss the following three subcases for three possible values of $m_3(u)$.
\begin{itemize}
\parskip=0pt
\item
	$m_3(u) = 2$, i.e., $d(f_1)\ge 4$ and $d(f_4)\ge 4$.
	Then Eq.~(\ref{eq11}) becomes $\sum_{x\in \{u\}\cup F_3(u)}\omega'(x)
	 = 2 - (\alpha_{u_1} + \frac 32\alpha_{u_2} + \frac 32\alpha_{u_3}) \le 2 - 4\times \frac 35 = - \frac 25< 0$.
\item
	$m_3(u) = 3$, and assuming w.l.o.g. $d(f_1) = 3$ and $d(f_4)\ge 4$.
	Then Eq.~(\ref{eq11}) becomes $\sum_{x\in \{u\}\cup F_3(u)}\omega'(x)
	 = 3 - (\frac 32 \alpha_{u_1} + 2\alpha_{u_2} + \frac 32 \alpha_{u_3}) \le 3 - 5\times \frac 35 = 0$.
\item
	$m_3(u) = 4$, i.e., $d(f_1) = d(f_4) = 3$.
	We can have the following three scenarios for three possible values of $d((f_2)_{u_2v})$ and $d((f_3)_{u_3v})$:
	\begin{itemize}
	\parskip=0pt
	\item
		$d((f_2)_{u_2 v}) \ge 4$ and $d((f_3)_{u_3 v}) \ge 4$.
		Then Eq.~(\ref{eq11}) becomes $\sum_{x\in \{u\}\cup F_3(u)}\omega'(x)
		 = 4 - (2 \alpha_{u_1} + \frac 52 \alpha_{u_2} + \frac 52 \alpha_{u_3}) \le 4 - 7 \times \frac 35 = - \frac 15< 0$.
	\item
		$\min\{d((f_2)_{u_2 v}), d((f_3)_{u_3 v})\} = 3$ and $\max\{d(f_{v w}), d((f_{u v})_{v w^*})\}\ge 4$,
		and assuming w.l.o.g. that $d((f_2)_{u_2 v}) = 3$ and $d((f_3)_{u_3 v}) \ge 4$.
		Then by (R4.1), $\tau(F(uv) \rightarrow v) = \frac 1{10}$, and thus Eq.~(\ref{eq11}) becomes
		$\sum_{x\in \{u\}\cup F_3(u)}\omega'(x)
		 = 4 - (2 \alpha_{u_1} + 2\alpha_{u_2} + \frac 52 \alpha_{u_3} + \frac 1{10}) \le 4 - \frac {13}2 \times \frac 35 - \frac 1{10}= 0$.
	\item
		$d((f_2)_{u_2 v}) = 3$ and $d((f_3)_{u_3 v}) = 3$.
		Then by (R4.2), $\tau(F(u v) \rightarrow v) = \frac 25$, and thus Eq.~(\ref{eq11}) becomes
		$\sum_{x\in \{u\}\cup F_3(u)}\omega'(x) = 4 - (2 \alpha_{u_1} + 2\alpha_{u_2} + 2\alpha_{u_3} + \frac 2{5}) \le 4 - 6 \times \frac 35 - \frac 25 = 0$.
	\end{itemize}
\end{itemize}
This finishes the proof of the lemma.
\end{proof}

\begin{demma}
\label{demma05}
Let $V_{34} = V_3(H)\cup V_4(H)$ and $F_{0}= \bigcup_{v\in V_{34}(H)}F_3(v)$. Then
$\sum_{x\in V_{34}\cup F_{0}}\omega'(x)\le 0. $
\end{demma}
\begin{proof}
It holds naturally by Lemma~\ref{lemma03}, Discharging-Lemmas~\ref{demma03} and \ref{demma04}.
\end{proof}

\begin{demma}
\label{demma06}
For every $3$-face $f = [u v w]\in F(H)$ with $\delta(f)\ge 5$, $\omega'(f) \le 0$.
\end{demma}
\begin{proof}
When $\delta(f)\ge 6$, by (R2), we have for each $x\in V(f)$, $\tau(f \rightarrow v)_f = \frac 13$, and thus
$\omega'(f) = \omega(f) - \sum_{x\in \{u, v, w\}} \tau(f \rightarrow x)_f = 4 - 3 - 3\times \frac 13 = 0$.

When $\delta(f) = 5$, i.e, there exists $x\in \{u, v, w\}$ such that $d(x) = d_H(x) = 5$, and assuming w.l.o.g. $d(u) = d_H(u) = 5$.
If $d_H(v)\ge 6$, $d_H(w)\ge 6$, and $\alpha_v + \alpha_w\ge 1$,
then by (R3.1), (R3.2), and (R5.2), $\tau(f \rightarrow v)_f = \alpha_v$, $\tau(f \rightarrow w)_f = \alpha_w$,
$\tau(u \rightarrow f) = \alpha_v + \alpha_w - 1$,
and thus
$\omega'(f) = \omega(f) + \tau(u \rightarrow f) - \sum_{x\in \{v, w\}} \tau(f \rightarrow x)_f \le 4 - 3 + (\alpha_v + \alpha_w - 1) - (\alpha_v + \alpha_w) = 0$.
Otherwise, by (R3.1), (R3.2) and (R5.1), the total weight transferred from $f$ to $u$, $v$, $w$ is at least $1$,
and thus $\omega'(f) \le \omega(f) - 1 = 4 - 3 - 1 = 0$.
\end{proof}

One sees that for each $5$-vertex $x$ in $H$, $\omega(x) = 4 - 5 = -1$, and $d(x) = d_H(x) = 5$.
It remains to validate that $\omega'(u)\le 0$ for every $u\in V(H)$ with $d_H(u) = 5$.

In the sequel we assume that $d(u) = d_H(u) = 5$.
Before validating $\omega'(u)$, using (R3.1), (R3.2), and (R5.1),
we characterize of the weight $\tau(f \rightarrow u)$ transferred from an incident $3$-face $f = [u v w]$ to $u$ in Discharging-Lemma~\ref{demma07},
also depicted in Figure~\ref{fig04}.

\begin{figure}[]
\begin{center}
\includegraphics[width=5.5 in]{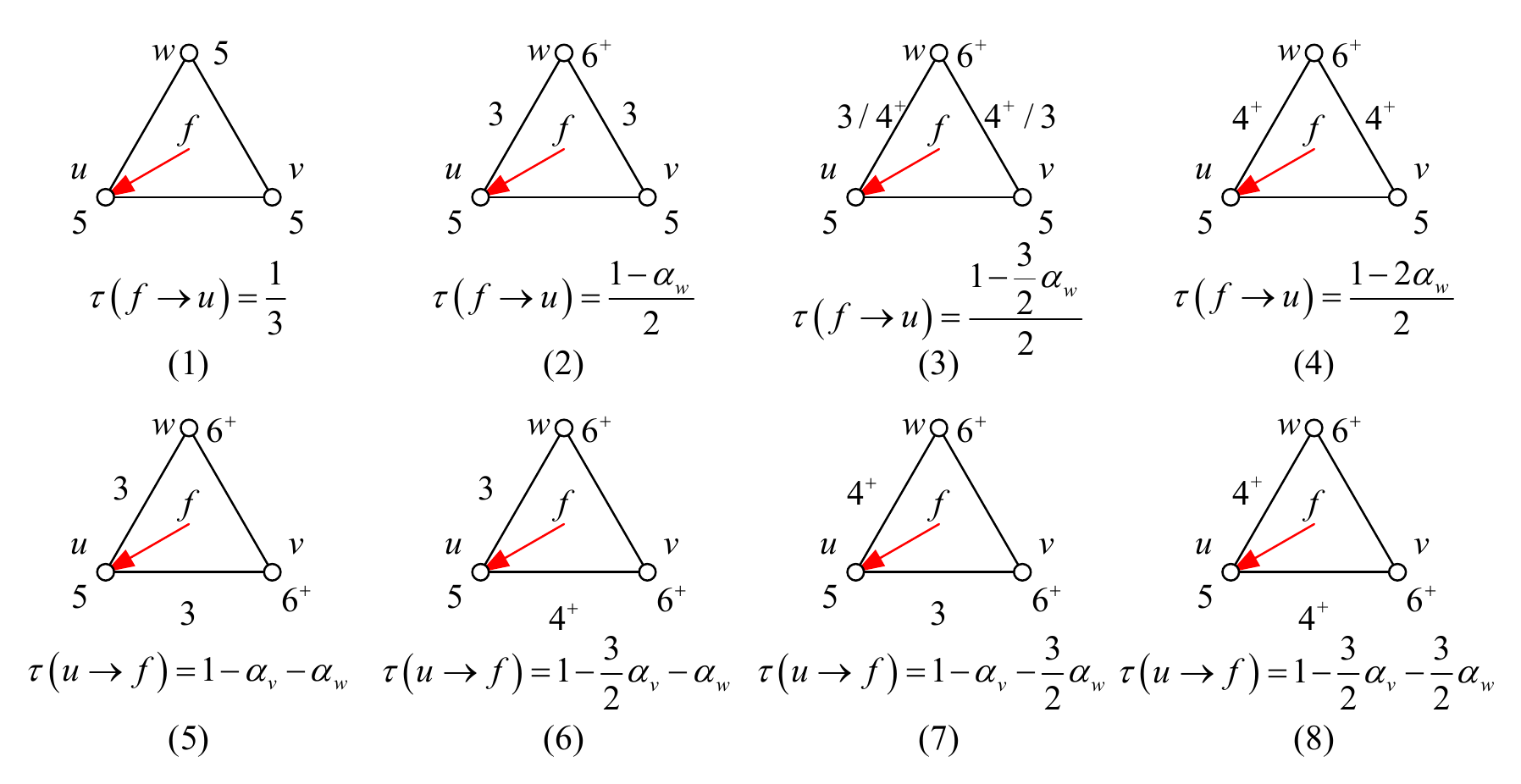}
\caption{The weight $\tau(f \rightarrow u)$ transferred from an incident $3$-face $f = [u v w]$ to $u$.\label{fig04}}
\end{center}
\end{figure}

\begin{demma}
\label{demma07}
For every $5$-vertex $u$ in $V(H)$,
let $f = [u v w]$ be a $3$-face,
if $d_H(v)\ge 6$, $d_H(w)\ge 6$, and $\alpha_v + \alpha_w\ge 1$,
then $\tau(u \rightarrow f) = \alpha_v + \alpha_w - 1$;
otherwise, $\tau(f \rightarrow u)\le \frac 13$, with the following details:
\begin{itemize}
\parskip=0pt
\item[{\rm (1)}]
	if $d_H(v) = d_H(w) = 5$, then $\tau(f \rightarrow u) = \frac 13$;
\item[{\rm (2)}]
	if $d_H(v)= 5$ and $d_H(w)\ge 6$, then $$\tau(f \rightarrow u) =\begin{cases}
		\frac {1 - \alpha_w}2, 				&\mbox{ if } d(f_{uw}) = d(f_{vw}) = 3, \mbox{ and } \alpha_w\le 1;\\
		\frac {1 - \frac 32 \alpha_w}2,	&\mbox{ if } 3\in \{d(f_{uw}), d(f_{vw})\}, \max\{d(f_{uw}), d(f_{vw})\}\ge 4,  \mbox{ and } \alpha_w\le \frac 23;\\
		\frac {1 - 2\alpha_w}2,				&\mbox{ if } d(f_{uw})\ge 4, d(f_{vw})\ge 4, \mbox{ and } \alpha_w\le \frac 12;\end{cases}$$
\item[{\rm (3)}]
	if $d_H(v)\ge 6$, and $d_H(w)\ge 6$, then $$\tau(f \rightarrow u) =\begin{cases}
		1 - \alpha_v - \alpha_w,				&\mbox{ if } d(f_{uv}) = d(f_{uw}) = 3, \mbox{ and } \alpha_v + \alpha_w\le 1;\\
		1 - \frac 32\alpha_v - \alpha_w, &\mbox{ if } d(f_{uv})\ge 4, d(f_{uw}) = 3, \mbox{ and } \frac 32\alpha_v + \alpha_w\le 1;\\
		1 - \alpha_v - \frac 32\alpha_w, &\mbox{ if } d(f_{uv})= 3, d(f_{uw})\ge 4, \mbox{ and } \alpha_v + \frac 32 \alpha_w\le 1;\\
		1 - \frac 32\alpha_v - \frac 32\alpha_w = 0, &\mbox{ if } d(f_{uv})\ge 4, d(f_{uw})\ge 4, \mbox{ and } \alpha_v = \alpha_w = \frac 13.\end{cases}$$
\end{itemize}
\end{demma}
\begin{proof}
When $d_H(v)\ge 6$, $d_H(w)\ge 6$, and $\alpha_v + \alpha_w\ge 1$,
then it follows from (R5.2) clearly that $\tau(u \rightarrow f) = \alpha_v + \alpha_w - 1$.
In the other case, either $\min \{d_H(v), d_H(w)\} = 5$,
or $d_H(v)\ge 6$, $d_H(w)\ge 6$ and $\alpha_v + \alpha_w < 1$.
We assume that after applying (R3), the face $f$ still has charge $\beta \ge 0$.
Using (R3.1), (R3.2), and (R5.1), we discuss the following three subcases for three possible values of $n_5(f)$.

Case 1. $n_5(f) = 3$, i.e, $d_H(v) = d_H(w) = 5$.
In this case, no weight transferred from $f$ to $u$, $v$, or $w$ by (R3.1) and (R3.2).
It follows that $\beta = 1$, and $\tau(f \rightarrow u) = \frac \beta 3 = \frac 13$.

Case 2. $n_5(f) = 2$, and assuming w.l.o.g. $d_H(v)= 5$ and $d_H(w)\ge 6$.
In this case, $\tau(f \rightarrow w)_f = \alpha_w$ and $\tau(f \rightarrow w)_{f_{uw}} \le \frac 12\alpha_w$,
$\tau(f \rightarrow w)_{f_{vw}} \le \frac 12\alpha_w$.
Recall that $\alpha_w\ge \frac 13$.
\begin{itemize}
\parskip=0pt
\item
	$d(f_{uw}) = d(f_{vw}) = 3$.
	It follows that $\beta = 1 - \alpha_w$.
	If $\alpha_w\le 1$, then $\beta\ge 0$ and thus $\tau(f \rightarrow u) = \frac \beta 2 = \frac {1 - \alpha_w}2 \le \frac {1 - \frac 13}2 = \frac 13$.
\item
	$3\in \{d(f_{uw}), d(f_{vw})\}$ and $\max\{d(f_{uw}), d(f_{vw})\}\ge 4$.
	It follows that $\beta = 1 - \alpha_w - \frac 12 \alpha_w = 1 - \frac 32 \alpha_w$.
	If $\frac 32 \alpha_w \le 1$, i.e., $\alpha_w\le \frac 23$,
	then $\beta\ge 0$ and thus $\tau(f \rightarrow u) = \frac \beta 2 = \frac {1 - \frac 32 \alpha_w}2 \le \frac {1 - \frac 32 \times \frac 13}2 = \frac 14$.
\item
	$d(f_{uw})\ge 4$, and $d(f_{vw})\ge 4$.
	It follows that $\beta = 1 - \alpha_w - \frac 12 \alpha_w - \frac 12 \alpha_w = 1 - 2 \alpha_w$.
	If $2 \alpha_w \le 1$, i.e., $\alpha_w\le \frac 12$,
	then $\beta\ge 0$ and thus $\tau(f \rightarrow u) = \frac \beta 2 = \frac {1 - 2 \alpha_w}2 \le \frac {1 - 2\times \frac 13}2 = \frac 16$.
\end{itemize}

Case 3. $n_5(f) = 1$, i.e., $d_H(v)\ge 6$ and $d_H(w)\ge 6$.
In this case, $\tau(f \rightarrow v)_f = \alpha_v$, $\tau(f \rightarrow w)_f = \alpha_w$,
and $\tau(f \rightarrow w)_{f_{uv}} \le \frac 12\alpha_v$,
$\tau(f \rightarrow w)_{f_{uw}} \le \frac 12\alpha_w$.
Recall that $\alpha_v\ge \frac 13$ and $\alpha_w\ge \frac 13$.
\begin{itemize}
\parskip=0pt
\item
	$d(f_{uw}) = d(f_{vw}) = 3$.
	It follows that $\beta = 1 - \alpha_v - \alpha_w$.
	If $\alpha_v + \alpha_w\le 1$, then $\beta\ge 0$ and thus $\tau(f \rightarrow u) = \beta  = 1 - \alpha_v - \alpha_w \le 1 - \frac 13 - \frac 13 = \frac 13$.
\item
	$d(f_{uv})\ge 4$, $d(f_{uw}) = 3$.
	It follows that $\beta = 1 - \frac 32 \alpha_v - \alpha_w$.
	If $\frac 32\alpha_v + \alpha_w \le 1$,
	then $\beta\ge 0$ and thus $\tau(f \rightarrow u) = \beta  = 1 - \frac 32 \alpha_v - \alpha_w \le 1 - \frac 32 \times \frac 13 - \frac 13 = \frac 16$.
\item
	$d(f_{uv}) = 3$, $d(f_{uw}) \ge 4$.
	It follows that $\beta = 1 - \alpha_v - \frac 32 \alpha_w$.
	If $\alpha_v + \frac 32 \alpha_w \le 1$,
	then $\beta\ge 0$ and thus $\tau(f \rightarrow u) = \beta  = 1 - \alpha_v - \frac 32 \alpha_w \le 1 - \frac 13 - \frac 32 \times \frac 13 = \frac 16$.
\item
	$d(f_{uv}) \ge $, $d(f_{uw}) \ge 4$.
	It follows that $\beta = 1 - \frac 32 \alpha_v - \frac 32 \alpha_w$.
	If $\frac 32 \alpha_v - \frac 32 \alpha_w\le 1$, i.e., $\alpha_v = \alpha_w = \frac 13$,
	then $\beta\ge 0$ and thus $\tau(f \rightarrow u) = \beta  = 1 - \frac 32 \alpha_v - \frac 32 \alpha_w = 1 -\frac 32 \times \frac 13 - \frac 32 \times \frac 13 = 0$.
\end{itemize}
This finishes the proof of the lemma that $\tau(f \rightarrow u) \le \frac 13$.
\end{proof}

Discharging-Lemma~\ref{demma07} tells that the value of $\tau(f \rightarrow u)$ depends on $n_5(f)$ and some other factors.
One sees that for any $x$ with $d_H(x) = k\ge 6$, $\alpha_x \ge 1 - \frac 4k$.
Hence, we have the following lemma to upper bound $\tau(f \rightarrow u)$.

\begin{demma}
\label{demma08}
For every $5$-vertex $u$ in $V(H)$, let $f = [u v w]$ be a $3$-face, and then we have the following:
\begin{itemize}
\parskip=0pt
\item[{\rm (1)}]
	If $d_H(v)= 5$ and $d_H(w) = k \ge 6$, then $$\tau(f \rightarrow u) =\begin{cases}
		\frac {1 - \alpha_w}2\le \frac 2k,									&\mbox{ if } d(f_{uw}) = d(f_{vw}) = 3, \mbox{ and } \alpha_w\le 1;\\
		\frac {1 - \frac 32 \alpha_w}2\le \frac 3k - \frac 14,	&\mbox{ if } 3\in \{d(f_{uw}), d(f_{vw})\}, \max\{d(f_{uw}), d(f_{vw})\}\ge 4,  \mbox{ and } k\le 12;\\
		\frac {1 - 2\alpha_w}2\le \frac 4k - \frac 12,				&\mbox{ if } d(f_{uw})\ge 4, d(f_{vw})\ge 4, \mbox{ and } k\le 8.\end{cases}$$
\item[{\rm (2)}]
	If $d_H(v) = k_1\ge 6$ and $d_H(w) = k _2\ge 6$, then $$\tau(f \rightarrow u) =\begin{cases}
		1 - \alpha_v - \alpha_w\le \frac 4{k_1} + \frac 4{k_2} - 1,
				&\mbox{ if } d(f_{uv}) = d(f_{uw}) = 3, \mbox{ and } \frac 4{k_1} + \frac 4{k_2} - 1\ge 0;\\
		1 - \frac 32\alpha_v - \alpha_w\le \frac 6{k_1} + \frac 4{k_2} - \frac 32,
				&\mbox{ if } d(f_{uv})\ge 4, d(f_{uw}) = 3, \mbox{ and } \frac 6{k_1} + \frac 4{k_2} - \frac 32\ge 0;\\
		1 - \alpha_v - \frac 32\alpha_w\le \frac 4{k_1} + \frac 6{k_2} - \frac 32,
				&\mbox{ if } d(f_{uv})= 3, d(f_{uw})\ge 4, \mbox{ and } \frac 4{k_1} + \frac 6{k_2} - \frac 32\ge 0;\\
		1 - \frac 32\alpha_v - \frac 32\alpha_w\le \frac 6{k_1} + \frac 6{k_2} - 2 = 0,
				&\mbox{ if } d(f_{uv})\ge 4, d(f_{uw})\ge 4, \mbox{ and } k_1 = k_2 = 6.\end{cases}$$
\end{itemize}
\end{demma}

\begin{figure}[h]
\begin{center}
\includegraphics[width=0.8\textwidth]{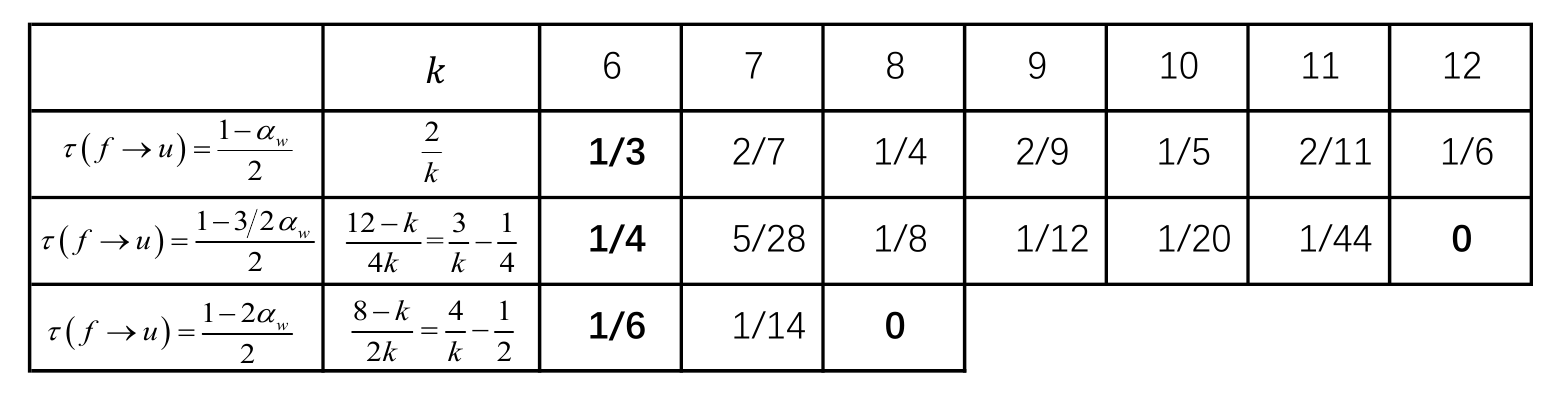}\medskip
\caption{Some special values for the upper bounds $\frac 2k$, $\frac 3k - \frac 14$, and $\frac 4k - \frac 12$, where $k \ge 6$.\label{fig05}}
\end{center}
\end{figure}

\begin{figure}[h]
\begin{center}
\includegraphics[width=0.8\textwidth]{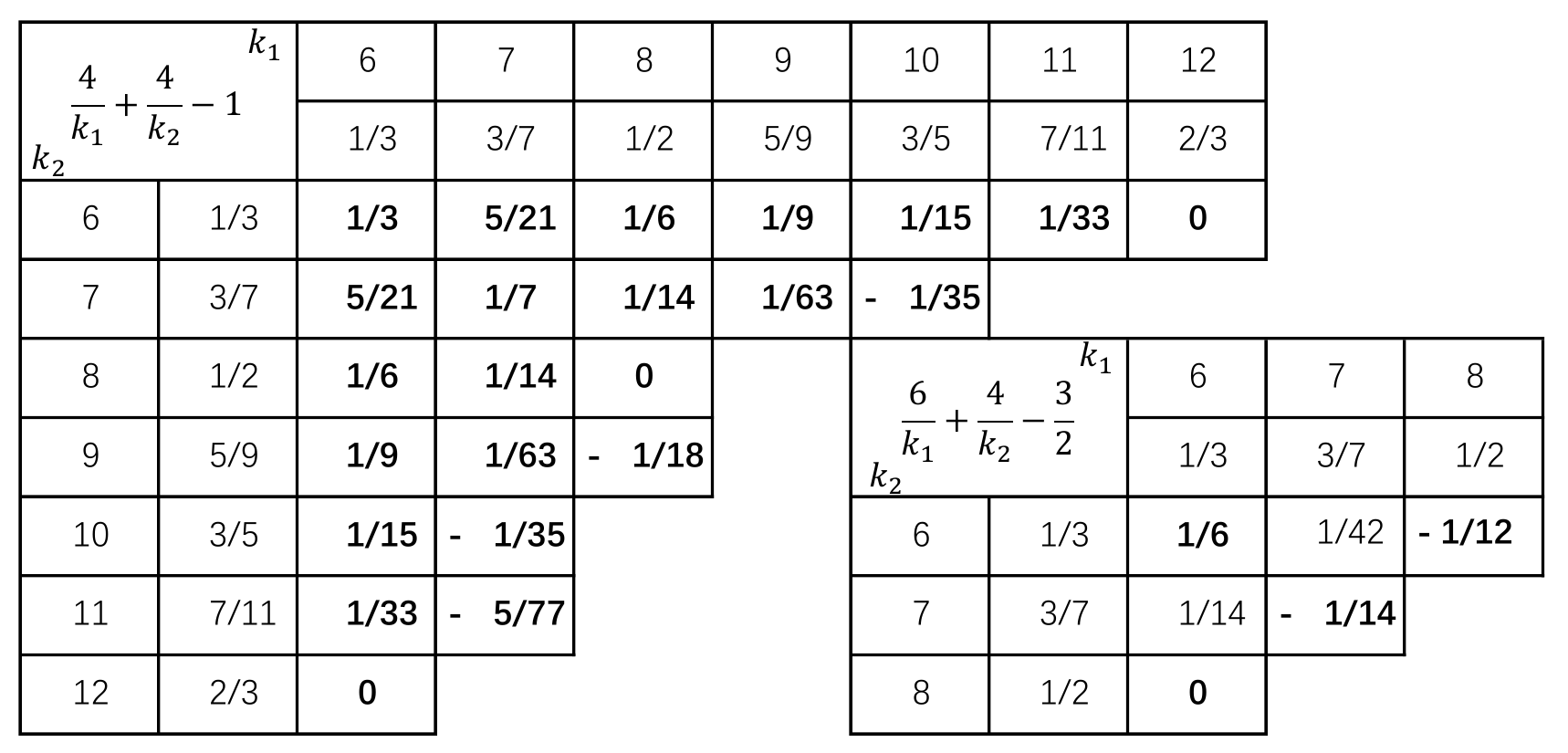}
\caption{Some special values for the upper bounds $\frac 4{k_1} + \frac 4{k_2} - 1$ and $\frac 6{k_1} + \frac 4{k_2} - 1$, where $k_1, k_2 \ge 6$.\label{fig06}}
\end{center}
\end{figure}

For convenience, for some small values of $k, k_1, k_2 \ge 6$,
we list the upper bounds stated in Discharging-Lemma~\ref{demma08} in the two tables depicted in Figure~\ref{fig05} and Figure~\ref{fig06}.
In the sequel, we will directly refer to these values.
Let $u_1, u_2, u_3, u_4, u_5$ denote all the neighbors of $u$ in clockwise order,
and denote the face containing $u_1 u u_2$ ($u_2 u u_3$, $u_3 u u_4$, $u_4 u u_5$, $u_5 u u_1$, respectively) as $f_1$ ($f_2, f_3, f_4$, $f_5$ respectively).
Recall that for each $x\in N_H(u)$, $d_H(x)\ge 4$.

For a $3$-face $f = [v u w]$ incident to $u$ with $d_H(v)\ge 6$, $d_H(w)\ge 6$, and $\alpha_v + \alpha_w\ge 1$,
by (R5.2), $\tau(u \rightarrow f) = \alpha_v + \alpha_w - 1\ge 0$.
Then $\tau(f \rightarrow u) = - \tau(u \rightarrow f)\le 0$.
Hence,
\begin{equation}
\label{eq12}
\begin{aligned}
\omega'(u) = \omega(u) + \sum_{f\in F_3(v)} \tau(f \rightarrow u)
           = 4 - 5 + \sum_{f\in F_3(v)} \tau(f \rightarrow u)
           = -1 + \sum_{f\in F_3(v)} \tau(f \rightarrow u).
\end{aligned}
\end{equation}
We validate that $\omega'(u) \le 0$ for each $5$-vertex $u$, using Eq.~(\ref{eq12}), in the following two lemmas.

\begin{demma}
\label{demma09}
For every $5$-vertex $u$ adjacent with a $4$-vertex $u_1$ in $V(H)$, $\omega'(u) \le 0$.
\end{demma}
\begin{proof}
One sees that for any $4$-vertex $x\in N_H(u)$, $d(x) = d_H(x) = 4$ and
the average weight transferred from each $3$-face in $F(ux)$ to $u$ is at most $\max\{\frac 12\times \frac 1{10}, \frac 12\times \frac 25\} = \frac 15$.
Thus, together with Discharging-Lemma~\ref{demma07},
the average weight transferred from each $3$-face in $F_3(u)$ to $u$ is at most $\frac 13$.

When $m_3(u_1) \ne 4$, or $m_3(u_1) = 4$ and $\min\{d(f_2), d(f_4)\}\ge 4$, no weight is transferred from any face in $F(uu_1)$ to $u$;
since the average weight transferred from each $3$-face in $\{f_2, f_3, f_4\}$ is at most $\frac 13$, Eq.~(\ref{eq12}) becomes
$\omega'(u) = -1 + \sum_{f\in \{f_2, f_3, f_4\}} \tau(f \rightarrow u) \le -1 + 3\times \frac 13 = 0$.

When $m_3(u_1) = 4$ and $\min\{d(f_2), d(f_4)\} = 3$, it follows from Lemma~\ref{lemma02} (2.4) that for each $x\in \{u_2, u_5\}$, $d_H(x)\ge 8$ and $d(x)\ge 10$;
thus, by Lemma~\ref{lemma04properties}(1) and (4), $\alpha_x\ge \frac 35$.
Using (R3), (R4), (R5) and Discharging-Lemma~\ref{demma07}, we discuss the following three subcases on $d(f_i)$, $i\in \{2, 3, 4\}$.
\begin{itemize}
\parskip=0pt
\item[{\rm (1)}]
	$\max\{d(f_2), d(f_4)\}\ge 4$, and assuming w.l.o.g. $d(f_2) = 3$ and $d(f_4)\ge 4$.
	In this case, $\tau(F(uu_1) \rightarrow u) = \frac 1{10}$.
	Since the average weight transferred from each $3$-face in $\{f_2, f_3\}$ is at most $\frac 13$, Eq.~(\ref{eq12}) becomes
	$\omega'(u) = -1 + \sum_{f\in \{f_1, f_5\}} \tau(f \rightarrow u) \le -1 + \frac 1{10} + 2\times \frac 13 = - \frac 7{30}< 0$.
\item[{\rm (2)}]
	$d(f_2) = d(f_4) = 3$ and $d(f_3)\ge 4$.
	In this case, $\tau(F(uu_1) \rightarrow u) = \frac 25$.
	One sees that if $d_H(u_3) = 4$, no weight will be transferred from $f_2$ to $u$;
	if $d_H(u_3)= 5$, then $\tau(f_2 \rightarrow u)\le \frac {1 - \alpha_{u_2}} 2 \le \frac {1 - \frac 35} 2 = \frac 15$;
	if $d_H(u_3)\ge 6$, since $\frac 32 \times \alpha_{u_3} + \alpha_{u_2}\ge \frac 32 \times \frac 13 + \alpha_{u_2}\ge  \frac 12 + \frac 35 = \frac {11}{10}$,
	then we have $\tau(f_2 \rightarrow u)\le 0$.
	It follows that $\tau(f_2 \rightarrow u)\le \frac 15$; and likewise, $\tau(f_4 \rightarrow u)\le \frac 15$.
	Hence, Eq.~(\ref{eq12}) becomes
	$\omega'(u) = -1 + \sum_{f\in \{f_1, f_5\}} \tau(f \rightarrow u) + \sum_{f\in \{f_2, f_4\}} \tau(f \rightarrow u) \le -1 + \frac 25 + 2\times \frac 15= - \frac 15< 0$.
\item[{\rm (3)}]
	$d(f_3) = d(f_2) = d(f_4) = 3$.
	In this case, $\tau(F(uu_1) \rightarrow u) = \frac 25$.
	\begin{itemize}
	\parskip=0pt
	\item
		$\min\{d_H(u_3), d_H(u_4)\} = 4$, and assuming w.l.o.g. $d(u_3) = d_H(u_3) = 4$.
		It follows that $d_H(u_4)\ge 8$ and $d(u_4)\ge 10$,
		and then $\tau(F(uu_3) \rightarrow u) \le \frac 25$ and $\tau(u \rightarrow f_4) = \alpha_{u_5} + \alpha_{u_4} - 1\ge 2\times \frac 35 - 1 = \frac 15$,
		i.e., $\tau(f_4 \rightarrow u) \le -\frac 15$.
     Hence, Eq.~(\ref{eq12}) becomes
		$\omega'(u) = -1 + \sum_{f\in \{f_1, f_5\}} \tau(f \rightarrow u) + \sum_{f\in \{f_2, f_3\}} \tau(f \rightarrow u) + \tau(f_4 \rightarrow u)
			\le - 1 + 2\times \frac 25  - \frac 15 = -\frac 25 < 0$.
	\item
		$\min\{d_H(u_3), d_H(u_4)\} = 5$, and assuming w.l.o.g. $d(u_3) = d_H(u_3) = 5$.
		The non-existence of ($A_{7.1}$) in $G$ states that $d_H(u_4)\ge 7$.
		Then we have $\tau(f_2 \rightarrow u)\le \frac {1 - \alpha_{u_2}}2 \le  \frac {1 - \frac 35}2 = \frac 15$,
		$\tau(f_3 \rightarrow u)\le \frac {1 - \alpha_{u_4}}2 \le  \frac {1 - \frac 37}2 = \frac 27$,
		and since $\alpha_{u_5} + \alpha_{u_4}\ge \frac 35 + \frac 37 = \frac {36}{35}$,
		$\tau(u \rightarrow f_4) = \alpha_{u_5} + \alpha_{u_4} - 1 \ge \frac {36}{35} - 1 = \frac 1{35}$, i.e., $\tau(f_4 \rightarrow u)\le - \frac 1{35}$.
     Hence, Eq.~(\ref{eq12}) becomes
		$\omega'(u) = -1 + \sum_{f\in \{f_1, f_5\}} \tau(f \rightarrow u) + \tau(f_2 \rightarrow u) + \tau(f_3 \rightarrow u) + \tau(f_4 \rightarrow u)
			\le - 1 + \frac 25  + \frac 15 + \frac 27 - \frac 1{35}= -\frac 17 < 0$.
	\item
		$d_H(u_3) \ge 6$ and $d_H(u_4) \ge 6$.
		Then we have $\tau(f_2 \rightarrow u)= 1 - \alpha_{u_2} - \alpha_{u_3} \le 1 - \frac 35 - \frac 13 = \frac 1{15}$,
		$\tau(f_3 \rightarrow u)\le \frac 13$, and $\tau(f_4 \rightarrow u)= 1 - \alpha_{u_5} - \alpha_{u_4} \le 1 - \frac 35 - \frac 13 = \frac 1{15}$.
     Hence, Eq.~(\ref{eq12}) becomes
		$\omega'(u) = -1 + \sum_{f\in \{f_1, f_5\}} \tau(f \rightarrow u) + \tau(f_2 \rightarrow u) + \tau(f_3 \rightarrow u) + \tau(f_4 \rightarrow u)
			\le - 1 + \frac 25  + \frac 1{15} + \frac 13 + \frac 1{15} = -\frac 2{15} < 0$.
	\end{itemize}
\end{itemize}
This proves the lemma.
\end{proof}

\begin{demma}
\label{demma10}
For every $5$-vertex $u$ in $V(H)$, if $d_H(x)\ge 5$ for each $x\in N_H(u)$, then $\omega'(u) \le 0$.
\end{demma}
\begin{proof}
It follows from Discharging-Lemma~\ref{demma07} that for each face $f\in F_3(u)$, $\tau(f \rightarrow u)\le \frac 13$.
When $m_3(f)\le 3$, Eq.~(\ref{eq12}) becomes $\omega'(u) \le -1 + 3\times \frac 13 = 0$.

When there exist two faces $f = [\ldots v_1 u w_1 \ldots\}$ and $h = [\ldots v_2 u w_2]$ such that for any $x\in \{v_1, w_1\}\cup \{v_2, w_2\}$, $\alpha_x\ge \frac 12$,
one sees that if $d(f)$ = 3, then $\tau(u \rightarrow f) = \alpha_{v_1} + \alpha_{w_1} - 1 \ge 2\times \frac 12 - 1\ge 0$, i.e., $\tau(f \rightarrow u)\le 0$.
It follows that $\tau(f \rightarrow u)\le 0$, and likewise $\tau(g \rightarrow u)\le 0$, and Eq.~(\ref{eq12}) becomes
$\omega'(u) = -1 + \sum_{f\in F_3(v)\setminus \{f, h\}} \tau(f \rightarrow u)\le - 1 + 3\times \frac 13 = 0$.

If there exist $v, w\in N_H(u)$ such that $d_H(v)= 5$ and $d_H(w)\le 6$, i.e., $d(v) = 5$ and $d(w)\le 6$,
$f = [vuw]$ is a $3$-face and $m_3(uv) = 2$,
then the non-existence of ($2.1$), ($6.3$) and ($7.1$) in $G$ imply that for each $x\in N_H(u)\setminus \{v, w\}$, $d(x)\ge 8$,
and since $d(u)= 5$, we have $d_H(x)\ge 8$ and thus $\alpha_x\ge \frac 12$.

Recall that if $d_H(x)\ge 8$ or $x$ is $(7, 5^+)_H$, then $\alpha_x\ge \frac 12$;
if $d_H(x)\le 6$, then $d(x) = d_H(x)\le 6$;
and if $x$ is $(7, 4^-)_H$, then $x$ is $(7, 4^-)_{HG}$.
In the other case, $4\le m_3(f)\le 5$.
In the sequel, we continue the proof with the following propositions which summarize the above argument:
\begin{itemize}
\parskip=0pt
\item[{\rm (I)}]
    For each $v, w\in N_H(x)$ with $vw\in E(H)$ and $m_3(uv) = 2$, if $d_H(v)= 5$, i.e., $d(v) = 5$, then $d_H(w)\ge 7$ and $\alpha_{w}\ge \frac 37$; if $d_H(w)\le 6$, i.e., $d(w) \le 6$, then $d_H(v)\ge 6$ and $\alpha_{v}\ge \frac 13$;
\item[{\rm (II)}]
    If $\alpha_x < \frac 12$ for some $x$, then $d(x) = d_H(x)\le 6$ or $x$ is $(7, 4^-)_{HG}$.
\end{itemize}

Assume w.l.o.g. $d(f_i) = 3$, for any $i = 1, 2, 3, 4$, and $d(f_5)\ge 3$.
Using (R3.1), (R3.2), (R5.1), and (R5.2), we discuss the following two cases for two possible values of $m_3(u)$.

{Case 1.}  $m_3(u) = 4$, i.e., $d(f_5)\ge 4$.
In this case, Eq.~(\ref{eq12}) becomes
\begin{equation}
\label{eq13}
\omega'(u) = -1 + \sum_{f\in \{f_1, f_2, f_3, f_4\}} \tau(f \rightarrow u).
\end{equation}
We can have the following three scenarios for three possible values of $d_H(u_1)$ and $d_H(u_5)$.

{Case 1.1.} $d_H(u_1)\ge 6$ and $d_H(u_5)\ge 6$.

(Case 1.1.1.) $d_H(u_2)\ge 6$ and $d_H(u_4)\ge 6$.
In this case, by Discharging-Lemmas~\ref{demma07} and \ref{demma08}%
\footnote{We do not calculate $\tau(f \rightarrow u)$ in details but refer to its upper bounds in the two tables in Figures~\ref{fig05} and \ref{fig06}},
$\tau(f_1 \rightarrow u) = 1 - \frac 32\alpha_{u_1} - \alpha_{u_2} = 1 - \frac 32 \times \frac 13 - \frac 13 = 1 - \frac 56 = \frac 16 $,
and likewise $\tau(f_4\rightarrow u)\le \frac 16$,
and thus Eq.~(\ref{eq13}) becomes $\omega'(u) \le - 1 + 2\times \frac 16 + 2\times \frac 13 = 0$.

(Case 1.1.2.) $d_H(u_2) = d_H(u_4) = 5$, i.e.,  $d(u_2) = d(u_4) = 5$.
It follows from (I) that for each $i\in 1, 3, 5$, $\alpha_{u_i}\ge \frac 37$.
Then $\tau(f_1 \rightarrow u) = \frac {1 - \frac 32 \alpha_{u_1}} 2\le \frac {1 - \frac 32 \times \frac 37} 2\le \frac 5{28}$,
and $\tau(f_2\rightarrow u)\le \frac {1 - \alpha_{u_3}}2\le \frac {1 - \frac 37}2 = \frac 27$.
Likewise, $\tau(f_4 \rightarrow u) \le \frac 5{28}$, and $\tau(f_3\rightarrow u)\le \frac 27$.
Thus, Eq.~(\ref{eq13}) becomes $\omega'(u) \le - 1 + 2\times \frac 5{28} + 2\times \frac 27 = -1 + \frac {13}{14} = -\frac 1{14}$.

(Case 1.1.3.) Assuming w.l.o.g. $d_H(u_2) = 5$ and $d_H(u_4)\ge 6$, i.e.,  $d(u_2)= 5$.
It follows from (I) that for each $i\in \{1, 3\}$, $\alpha_{u_i}\ge \frac 37$.
Then $\tau(f_1 \rightarrow u)\le \frac 5{28}$,
$\tau(f_2\rightarrow u)\le \frac {1 - \alpha_{u_3}}2\le \frac {1 - \frac 37}2 = \frac 27$,
$\tau(f_3\rightarrow u) = 1 - \alpha_{u_3} - \alpha_{u_4}\le 1 - \frac 13 - \frac 37 = \frac 5{21}$,
and $\tau(f_4\rightarrow u)\le \frac 16$.
Thus, Eq.~(\ref{eq13}) becomes $\omega'(u) \le - 1 + \frac 5{28} + \frac 27 + \frac 5{21} + \frac 16 = - 1 + \frac {73}{84} = - \frac {11}{84}< 0$.

{Case 1.2.} $d_H(u_1) = d_H(u_5) = 5$, i.e., $d(u_1) = d(u_5) = 5$.
In this case, by (I), for each $i\in \{2, 4\}$, $d(u_i)\ge 6$ and $d_H(u_i)\ge 6$, $\alpha_{u_i}\ge \frac 13$.
When $\alpha_{u_3}\ge \frac 12$, $\tau(f_2\rightarrow u) = 1 - \alpha_{u_2} - \alpha_{u_3} \le \frac 16$,
and likewise $\tau(f_3\rightarrow u)\le \frac 12$, and thus Eq.~(\ref{eq13}) becomes
$\omega'(u) \le -1 + 2\times \frac 16 + 2\times \frac 13 = - 1 + \frac 56 = -\frac 16< 0$.

In the other case, $\alpha_{u_3}< \frac 12$.
It follows from (II) that $d(u_3) = d_H(u_3)\le 6$ or $u_3$ is $(7,4^-)_{HG}$.

(Case 1.2.1.) $d(u_3)= 5$.
For each $i\in \{2, 4\}$, it follows from (I) that $d_H(u_i)\ge 7$;
the non-existence of ($A_{7.2}$) in $G$ implies that $d(u_i)\ge 8$;
and since $n_{5}(u_i)\ge 3$, we have $d_H(u_i)\ge 8$.
It follows that $\alpha_{u_2}\ge \frac 12$ and $\alpha_{u_4}\ge \frac 12$.
Then for each $i\in \{1, 2, 3, 4\}$, $\tau(f_i \rightarrow u)\le \frac 14$, and Eq.~(\ref{eq13}) becomes $\omega'(u) \le - 1 + 4\times \frac 14 = 0$.

(Case 1.2.2.) $d(u_3)= 6$.
When for each $i\in \{2, 4\}$, $\alpha_{u_i}\ge \frac 12$,
we have $\tau(f_1 \rightarrow u)\le \frac 14$, $\tau(f_2\rightarrow u)\le \frac 16$,
$\tau(f_3\rightarrow u)\le \frac 16$, and $\tau(f_4\rightarrow u)\le \frac 14$.
Thus, Eq.~(\ref{eq13}) becomes $\omega'(u) \le - 1 + 2\times \frac 14 + 2\times \frac 16 = - \frac 16< 0$.

In the other case, there exists an $i\in \{2, 4\}$ such that $\alpha_{u_i}< \frac 12$.
It follows that $d(u_i) = d_H(u_i) = 6$, or $u_i$ is $(7, 4^-)_{HG}$.
Assuming w.l.o.g. $i = 2$, one sees that if $d(u_2) = 6$ or $d(u_4) = 6$, then ($A_{8.1}$) will be obtained, a contradiction.
Hence, $u_1$ is $(7, 4^-)_{HG}$ and $d_H(u_4)\ge 7$.
If $\alpha_{u_4}\ge \frac 12$, then $\tau(f_1 \rightarrow u)\le \frac 27$, $\tau(f_2\rightarrow u)\le \frac 5{21}$,
$\tau(f_3\rightarrow u)\le \frac 16$, and $\tau(f_4\rightarrow u)\le \frac 14$.
Thus, Eq.~(\ref{eq13}) becomes $\omega'(u) \le - 1 + \frac 27 + \frac 5{21} + \frac 16 + \frac 14 = - 1 + \frac{79}{84}  = - \frac 5{84} < 0$.
Otherwise, $\alpha_{u_4}< \frac 12$.
It follows from (II) that $d(u_4) = d_H(u_4)\le 6$ or $u_4$ is $(7, 4^-)_{HG}$, which leads to ($A_{8.1}$) or ($A_{8.4}$), a contradiction.

(Case 1.2.3.) $u_3$ is $(7, 4^-)_{HG}$.
When there exists an $i\in \{2, 4\}$ such that $d_H(u_i)\ge 7$, and assuming w.l.o.g. $i = 4$,
then $\tau(f_1 \rightarrow u)\le \frac 13$, $\tau(f_2\rightarrow u)\le \frac 5{21}$,
$\tau(f_3\rightarrow u) \frac 17$, and $\tau(f_4\rightarrow u)\le \frac 27$.
Thus, Eq.~(\ref{eq13}) becomes $\omega'(u) \le - 1 + \frac 13 + \frac 5{21} + \frac 17  + \frac 27 = 0$.
In the other case, for each $i\in \{2, 4\}$, $d(u_i) = d_H(u_i) = 6$, which leads to ($A_{8.2}$), a contradiction.

{Case 1.3.} Assuming w.l.o.g. $d(u_1) = d_H(u_1) = 5$ and $d_H(u_5)\ge 6$.
In this case, by (I), $d(u_2)\ge 6$ and $\tau(f_1\rightarrow u)\le \frac 13$.

(Case 1.3.1.) $d_H(u_4)= 5$, i.e., $d(u_4) = 5$.
It follows from (I) that for each $i\in \{3, 5\}$, $d_H(u_i)\ge 7$ and thus $\alpha_{u_i}\ge \frac 37$.
When $\alpha_{u_3}\ge \frac 12$,
$\tau(f_2\rightarrow u)\le \frac 16$,
$\tau(f_3\rightarrow u)\le \frac 14$,
$\tau(f_4\rightarrow u)\le \frac 5{28}$,
and thus Eq.~(\ref{eq13}) becomes $\omega'(u) \le - 1 + \frac 13 + \frac 16 + \frac 14 + \frac 5{28} = -\frac {13}{84}< 0$.

When $\alpha_{u_5}\ge \frac 12$,
$\tau(f_2\rightarrow u)\le \frac 5{21}$,
$\tau(f_3\rightarrow u)\le \frac 27$,
$\tau(f_4\rightarrow u)\le \frac 18$,
and thus Eq.~(\ref{eq13}) becomes $\omega'(u) \le - 1 + \frac 13 + \frac 5{21} + \frac 27 + \frac 18 = -\frac 3{168}< 0$.

In the other case, for each $i\in \{3, 5\}$, $\alpha_{u_i} < \frac 12$;
and it follows from (2) that $u_i$ is $(7, 4^-)_{HG}$, which leads to ($A_{8.4}$), a contradiction.

(Case 1.3.2.) $d_H(u_4)\ge 6$ and $d_H(u_2)\ge 6$.
When $d_H(u_3)\ge 7$,
$\tau(f_2\rightarrow u)\le \frac 5{21}$,
$\tau(f_3\rightarrow u)\le \frac 5{21}$,
$\tau(f_4\rightarrow u)\le \frac 16$,
and thus Eq.~(\ref{eq13}) becomes $\omega'(u) \le - 1 + \frac 13 + 2\times \frac 5{21} + \frac 16 = - \frac 1{42}< 0$.

In the other case, $d(u_3) = d_H(u_3)\le 6$.
First, assume that $d(u_3) = 5$.
It follows from (I) that for each $i\in \{2, 4\}$, $\alpha_{u_i}\ge \frac 37$.
Then for each $i\in \{1, 2, 3\}$, $\tau(f_i\rightarrow u)\le \frac 27$,
$\tau(f_4\rightarrow u)\le \frac 1{14}$,
and thus Eq.~(\ref{eq13}) becomes $\omega'(u) \le - 1 + 3\times \frac 27 + \frac 1{14} = - \frac 1{14}< 0$.
Next, assume that $d(u_3) = 6$.
When $\alpha_{u_5}\ge \frac 12$,
$\tau(f_4\rightarrow u_4)_{f_4}\ge \frac 13$,
$\tau(f_4\rightarrow u_5)_{f_4}\ge \frac 12$,
$\tau(f_4\rightarrow u_5)_{f_1}\ge \frac 14$,
it follows that $\tau(f_4\rightarrow u)= 0$,
and thus Eq.~(\ref{eq13}) becomes $\omega'(u) \le - 1 + 3\times \frac 13 = 0$.
In the other case, $\alpha_{u_5}<\frac 12$.
It follows from (II) that $d(u_5) = d_H(u_5) = 6$, or $u_5$ is $(7, 4^-)_{HG}$.
\begin{itemize}
\parskip=0pt
\item
	$d(u_5) = 6$.
	It follows from (I) that $d_H(u_4)\ge 6$.
	If $d_H(u_4) = 6$, then $d(u_4) = 6$ and ($A_{8.1}$) is obtained, a contradiction.
	Otherwise, $d_H(u_4) \ge 7$ and $\alpha_{u_4}\ge \frac 37$.
	Then, $\tau(f_3\rightarrow u)\le \frac 5{21}$, $\tau(f_4\rightarrow u)\le \frac 1{14}$,
	and thus Eq.~(\ref{eq13}) becomes $\omega'(u) \le - 1 + 2\times \frac 13 + \frac 5{21} + \frac 1{14}  = - \frac 1{42}< 0$.
\item
	$u_5$ is $(7, 4^-)$.
	When $d_H(u_4)\ge 7$,
	$\tau(f_4\rightarrow u_4)_{f_4}\ge \frac 37$,
	$\tau(f_4\rightarrow u_5)_{f_4}\ge \frac 37$,
	$\tau(f_4\rightarrow u_5)_{f_1}\ge \frac 3{14}$,
	it follows that $\tau(f_4\rightarrow u)= 0$, and thus Eq.~(\ref{eq13}) becomes $\omega'(u) \le - 1 + 3\times \frac 13 = 0$.
	In the other case, $d_H(u_4)= 6$, i.e., $d(u_4) = d_H(u_4) = 6$, which leads to ($A_{8.2}$), a contradiction.
\end{itemize}

{Case 2.}  $m_3(u) = 5$.
In this case, Eq.~(\ref{eq12}) becomes
\begin{equation}
\label{eq14}
\omega'(u) = -1 + \sum_{f\in \{f_1, f_2, f_3, f_4, f_5\}} \tau(f \rightarrow u).
\end{equation}

When $d_H(x)\ge 7$ for each $x\in N_H(u)$, $\alpha_x\ge \frac 37$,
for each $i\in [1, 5]$, $\tau(f_i\rightarrow u)\le \frac 17$,
and thus Eq.~(\ref{eq14}) becomes $\omega'(u) \le - 1 + 5\times \frac 17 = -\frac 27$.

In the other case, there exists an $x\in N_H(x)$ such that $d_H(x)\le 6$ and $d(x) = d_H(x)\le 6$.
We have the following two subcases for two possible values of $\min_{x\in N_H(x)} d_H(x)$.

{Case 2.1.} $d(u_3)= d_H(u_3) = 5$.
In this case, it follows from (I) that $d_H(u_2)\ge 7$, $d_H(u_4)\ge 7$, and $n_5(u)\le 2$.
We can have the following two scenarios:

(Case 2.1.1.) $n_5(u) = 2$, and assuming w.l.o.g. $d(u_1) = d_H(u_1)= 5$.
It follows from (I) that $d_H(u_5)\ge 7$.
When $\alpha_{u_i}\ge \frac 12$ for each $i\in \{2, 4, 5\}$,
$\tau(f_i \rightarrow u)\le \frac 14$, $i\in \{1, 2, 3, 5\}$,
and $\tau(f_4 \rightarrow u)\le 0$ since $\tau(f_4 \rightarrow u_4)_{f_4}\ge \frac 12$ and $\tau(f_4 \rightarrow u_5)_{f_4}\ge \frac 12$.
Thus, Eq.~(\ref{eq14}) becomes $\omega'(u) \le - 1 + 4\times \frac 14 = 0$.

In the other case, there exists a $w\in \{u_2, u_4, u_5\}$ such that $\alpha_{w}< \frac 12$.
It follows from (II) that $w$ is $(7, 4^-)_{HG}$.
The non-existence of ($A_{7.2}$) in $G$ implies that for each $x\in \{u_2, u_4, u_5\}\setminus \{w\}$, $d(x)\ge 9$,
and since $n'_{5}(x)\ge 2$, $d_H(x)\ge 9$ and $\alpha_x\ge \frac 59$.
By symmetry, we assume w.l.o.g. $w\in \{u_2, u_4\}$.
\begin{itemize}
\parskip=0pt
\item
	$u_2$ is $(7, 4^-)_{HG}$, $d_H(u_4)\ge 9$ and $d_H(u_5)\ge 9$.
	Then $\tau(f_i \rightarrow u)\le \frac 27$, $i = 1, 2$;
	$\tau(f_j \rightarrow u)\le \frac 29$, $j = 3, 5$;
	since $\alpha_{u_4}\ge \frac 59$ and $\alpha_{u_5}\ge \frac 59$,
	we have $\tau(u \rightarrow f_4) = \alpha_{u_4} + \alpha_{u_5} - 1 \ge 2\times 59 -1 = \frac 19$, i.e., $\tau(f_4 \rightarrow u)\le -\frac 19$.
	Thus, Eq.~(\ref{eq14}) becomes $\omega'(u) \le - 1 + 2\times \frac 27 + 2\times \frac 29 - \frac 19 = -\frac 2{21}< 0$.
\item
	$u_4$ is $(7, 4^-)_{HG}$, $d_H(u_2)\ge 9$ and $d_H(u_5)\ge 9$.
	Then $\tau(f_i \rightarrow u)\le \frac 29$, $i \in \{1, 2, 5\}$;
	$\tau(f_3 \rightarrow u)\le \frac 72$, and $\tau(f_4 \rightarrow u)\le \frac 1{63}$.
	Thus, Eq.~(\ref{eq14}) becomes $\omega'(u) \le - 1 + 3\times \frac 29 + \frac 27 + \frac 1{63} = -\frac 2{63}< 0$.
\end{itemize}

(Case 2.1.2.) $n_5(u) = 1$, i.e., $d_H(u_1)\ge 6$ and $d_H(u_5)\ge 6$.
When $n'_{7^+}(u) = 4$,
$\tau(f_i \rightarrow u)\le \frac 17$, $i\in \{1, 4, 5\}$,
$\tau(f_j \rightarrow u)\le \frac 27$, $j\in \{2, 3\}$,
and thus Eq.~(\ref{eq14}) becomes $\omega'(u) \le - 1 + 3\times \frac 17 + 2\times \frac 27 = 0$.

In the other case, $n'_{7^+}(u) \le 3$, i.e., there exists a $w\in \{u_1, u_5\}$ such that $d_H(w) = 6$ and then $d(w) = 6$.
Assume w.l.o.g. that $w = u_1$.
If $\alpha_{u_5}\ge \frac 59$,
then for each $i\in \{2, 3\}$, $\tau(f_1 \rightarrow u)\le \frac 5{21}$,
$\tau(f_i \rightarrow u)\le \frac 27$,
$\tau(f_4 \rightarrow u)\le \frac 1{63}$,
$\tau(f_5 \rightarrow u)\le \frac 19$,
and thus Eq.~(\ref{eq14}) becomes $\omega'(u) \le - 1 + \frac 5{21} + 2\times \frac 27 + \frac 1{63} + \frac 19 = -\frac 4{63}< 0$.
Otherwise, $\alpha_{u_5}< \frac 59$.
Hence, $d(u_5) = d_H(u_5) = 6$, or $u_5$ is $(7, 5^-)_{HG}$, or $u_5$ is $(8, 4^-)_{HG}$.
\begin{itemize}
\parskip=0pt
\item
	$d(u_5) = d_H(u_5) = 6$.
	The non-existence of ($A_{7.2}$) and ($A_{7.3}$) in $G$ imply that for each $i\in \{2, 4\}$, $d(u_i)\ge 9$,
	and since $n'_{5}(u_i) = n_{5}(u_i)\ge 2$, $d_H(u_i)\ge 9$.
	It follows that $\alpha_{u_2}\ge \frac 59$ and $\alpha_{u_4}\ge \frac 59$.
	Then $\tau(f_i \rightarrow u)\le \frac 19$, $i = 1, 4$,
	$\tau(f_j \rightarrow u)\le \frac 29$, $j = 2, 3$,
	$\tau(f_5 \rightarrow u)\le \frac 13$,
	and thus Eq.~(\ref{eq14}) becomes $\omega'(u) \le - 1 + 2\times \frac 19 + 2\times \frac 29 + \frac 13 = 0$.
\item
	$u_5$ is $(7, 5^-)_{HG}$.
	The non-existence of ($A_{7.2}$) in $G$ implies that for each $i\in \{2, 4\}$,
	$d(u_i)\ge 8$ and since $n'_{5}(u_i) = n_{5}(u_i)\ge 2$,  $d_H(u_i)\ge 8$.
	It follows that $\alpha_{u_2}\ge \frac 12$ and $\alpha_{u_4}\ge \frac 12$.
	Then $\tau(f_1 \rightarrow u)\le \frac 16$,
	$\tau(f_i \rightarrow u)\le \frac 14$, $i = 2, 3$,
	$\tau(f_4 \rightarrow u)\le \frac 1{14}$,
	$\tau(f_5 \rightarrow u)\le \frac 5{21}$,
	and thus Eq.~(\ref{eq14}) becomes $\omega'(u) \le - 1 + \frac 16 + 2\times \frac 14 + \frac 1{14}+ \frac 5{21} = \frac 1{42}< 0$.
\item
	$u_5$ is $(8, 4^-)_{HG}$.
	Then $\tau(f_5 \rightarrow u)\le \frac 16$.
	When $\alpha_{u_2}\ge \frac 12$,
	$\tau(f_1 \rightarrow u)\le \frac 16$,
	$\tau(f_2 \rightarrow u)\le \frac 14$,
	$\tau(f_3 \rightarrow u)\le \frac 27$,
	$\tau(f_4 \rightarrow u)\le \frac 1{14}$,
	and thus Eq.~(\ref{eq14}) becomes $\omega'(u) \le - 1 + 2\times \frac 16 + \frac 14 + \frac 27+ \frac 1{14} = -\frac 5{84}< 0$.

	When $\alpha_{u_4}\ge \frac 12$,
	$\tau(f_1 \rightarrow u)\le \frac 5{21}$,
	$\tau(f_2 \rightarrow u)\le \frac 27$,
	$\tau(f_3 \rightarrow u)\le \frac 14$,
	$\tau(f_4 \rightarrow u)\le 0$ since $\tau(f_4 \rightarrow u)_{f_4}\ge \frac 12$ and $\tau(f_4 \rightarrow u)_{f_5}= \frac 12$,
	and thus Eq.~(\ref{eq14}) becomes $\omega'(u) \le - 1 + \frac 5{21} +  \frac 27 + \frac 14 + \frac 16 = -\frac 5{84}< 0$.

	In the other case, $\alpha_{u_2}< \frac 12$ and $\alpha_{u_4}< \frac 12$.
	It follows from (II) that $u_2$ and $u_4$ are $(7, 4^-)_{HG}$, which leads to ($A_{7.2}$), a contradiction.
\end{itemize}

{Case 2.2.} $d(u_3)= d_H(u_3) = 6$ and $n'_{6^+}(u)= 5$.
In this case, Eq.~(\ref{eq14}) becomes
\begin{equation}
\label{eq15}
\omega'(u) \le - 1 + \sum_{i\in [1, 5]}(1 - \alpha_{u_i} - \alpha_{u_{(i \mod 5) + 1}}) = 4 - 2\sum_{i\in [1, 5]} \alpha_{u_i}.
\end{equation}
When $n'_{7^+}(u) = 4$, $\alpha_{u_i}\ge \frac 37$ for each $i\in \{1, 2, 4, 5\}$,
and thus Eq.~(\ref{eq15}) becomes $\omega'(u) \le 4 - 2\times (4\times \frac 37 + \frac 13) =  - \frac 2{21}< 0$.

In the other case, $n'_{7^+}(u)\le 3$, i.e., $n_{6}(u) = n'_{6}(u)\ge 2$.
The non-existence of ($A_{8.1}$) in $G$ implies that $n_{6}(u)\le 3$.
By symmetry, we can have the following two scenarios:

(Case 2.2.1.) $n_{6}(u) = 3$.
In this case, $d(u_1) = d(u_4) = 6$ or $d(u_2) = d(u_4) = 6$.
The non-existence of ($A_{8.1}$) and ($A_{8.2}$) in $G$ imply that for each $7^+$-vertex $x\in N(u)$, $d(x)\ge 8$ or $x$ is $(7, 5^+)$;
it follows that $x$ is $(7, 5^+)_H$ or $d_H(u_i)\ge 8$, which implies that $\alpha_{x}\ge \frac 12$.
Thus Eq.~(\ref{eq15}) becomes $\omega'(u) \le 4 - 2\times (2\times \frac 12 + 3\times \frac 13) = 0$.

(Case 2.2.2.) $n_{6}(u) = 3$.
When there exists a $7^+$-vertex $w\in N_H(u)$ such that $\alpha_w\ge \frac 12$, Eq.~(\ref{eq15}) becomes
$\omega'(u) \le 4 - 2\times (\frac 12 + 2\times \frac 37 + 2\times \frac 13) = -\frac 1{21} < 0$.
In the other case, for each $7^+$-vertex $x\in N_H(u)$, $\alpha_x < \frac 12$.
It follows from (II) that $x$ is $(7, 4^-)_{HG}$.
Since $d(u_2) = 6$ or $d(u_1) = 6$, ($A_{8.3}$) or ($A_{8.4}$) is obtained in $G$, a contradiction.
\end{proof}

The above Discharging-Lemmas~\ref{demma05}, \ref{demma06}, \ref{demma09} and \ref{demma10} together
contradict the positive total weight of $8$ stated in Eq.~(\ref{eq01}),
and thus prove Theorem~\ref{thm02}.

\section{Acyclic edge $(\Delta+5)$-coloring}
In this section, we show how to derive an acyclic edge $(\Delta+5)$-coloring for a simple $2$-connected planar graph $G$,
by an induction on the number of edges $|E(G)|$.
The following lemma gives the starting point.

\begin{lemma}{\rm (\cite{Sku04,AMM12,BC09,SWMW19,WMSW19})}
\label{lemma05}
If $\Delta \in \{3, 4\}$, then $a'(G) \le \Delta + 2$, and an acyclic edge $(\Delta + 2)$-coloring can be obtained in polynomial time.%
\footnote{\cite{Sku04} gives an $O(n)$-time algorithm for $\Delta = 3$;
	\cite{BC09} gives a polynomial-time algorithm for $\Delta = 4$ except $4$-regular;
	\cite{SWMW19,WMSW19} gives a polynomial-time algorithm for $4$-regular graphs.}
\end{lemma}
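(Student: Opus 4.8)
The plan is to treat the two values of $\Delta$ separately, since the known proofs compiled in the cited references rely on rather different techniques. In each case the natural framework is induction on $|E(G)|$, mirroring the scheme used for the main theorem: delete a carefully chosen edge $e = uv$, invoke the inductive hypothesis on $G - e$ to obtain an acyclic edge $(\Delta+2)$-coloring, and then argue that $e$ can be colored so that the result remains proper and free of bichromatic cycles. The two obstructions to such an extension are (i) the at most $(d(u)-1)+(d(v)-1)\le 2(\Delta-1)$ colors already appearing on edges adjacent to $e$, and (ii) the bichromatic cycles through $e$ that a candidate color might close. Weighing these against a palette of $\Delta+2$ colors is exactly what makes small $\Delta$ tractable, and it is where the structural slack (or lack thereof) decides the difficulty.

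For $\Delta = 3$ we have $\Delta+2 = 5$. First I would pass to $2$-connected graphs as the paper does, and delete an edge incident to a vertex of degree at most $3$. Then every edge sees at most $2(\Delta-1)=4$ colors on its neighbors, so at least one color is always free for a proper extension; the only residual danger is a bichromatic cycle. Here the subcubic structure is decisive: the two colors of a would-be bichromatic cycle through $e$ are highly constrained, and a short case analysis shows one of the free colors avoids closing such a cycle. This recovers both the bound and Skulrattanakulchai's direct $O(n)$-time procedure.

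For $\Delta = 4$ we have $\Delta+2 = 6$, and the argument splits once more. When $G$ is not $4$-regular, a vertex of degree at most $3$ supplies the slack needed for the inductive extension, as in the work of Basavaraju and Chandran: deleting an edge at such a vertex leaves $(d(u)-1)+(d(v)-1)\le 5 < 6$ forbidden colors, hence a free color, and any remaining bichromatic cycle can be destroyed by swapping colors along a maximal bichromatic Kempe path. The genuinely hard case is the $4$-regular one, where no low-degree vertex exists and the count $2(\Delta-1)=6=\Delta+2$ becomes exactly tight, so the naive local argument has zero room. This subcase was settled separately, and it is the main obstacle: one must combine a delicate choice of the deleted edge with global Kempe-chain recolorings, simultaneously tracking propriety and acyclicity, since a swap that destroys one bichromatic cycle can create another.

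Finally, the polynomial-time claim follows because each inductive step performs only a bounded number of local checks together with at most a linear number of Kempe-chain swaps; summed over the $O(|E(G)|)$ deletions, this yields a polynomial-time algorithm. The crux, and the step I expect to be hardest, is precisely the $4$-regular case of $\Delta = 4$, which is why it demanded its own dedicated treatment rather than falling out of the generic inductive scheme.
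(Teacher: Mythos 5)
The first thing to note is that the paper does not prove this lemma at all: it is imported verbatim from the cited references (\cite{Sku04} for $\Delta=3$, \cite{BC09} for non-regular $\Delta=4$, \cite{SWMW19,WMSW19} for the $4$-regular case) and used as the base case of the induction in Section~3. So there is no in-paper argument to compare yours against; the only question is whether your proposal stands on its own as a proof. It does not. What you have written is a roadmap that defers every genuinely difficult step to an unperformed ``short case analysis'' or to a ``dedicated treatment,'' which is to say it reproduces the structure of the citations without their content.

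Concretely, two gaps. For $\Delta=3$ your counting gives at least one properly usable color, but the claim that ``a short case analysis shows one of the free colors avoids closing such a cycle'' is where the entire difficulty lives, and as stated it is false as a local argument: if $C(u)=C(v)=\{1,2\}$ there are three free colors $3,4,5$, and each $j\in\{3,4,5\}$ can be blocked by an $(i,j)_{(u,v)}$-path for some $i\in\{1,2\}$ (the uniqueness of maximal bichromatic paths, Lemma~\ref{lemma06}, bounds the number of paths per \emph{pair} of colors, not per free color, so no pigeonhole rescues you). In that situation no free color extends the coloring and one must recolor edges at $u$, $v$, or their neighbors --- exactly the multi-page analysis in \cite{Sku04}. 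For $\Delta=4$ you explicitly concede that the $4$-regular case ``was settled separately'' and ``demanded its own dedicated treatment''; that is a citation, not a proof, and since it is the case you yourself identify as the crux, the proposal proves nothing beyond what the references already assert. If the intent is merely to justify quoting the lemma, the citation suffices (as the paper does); if the intent is to supply a self-contained proof, the $\Delta=3$ recoloring analysis and the $4$-regular argument must actually be carried out.
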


Given a partial acyclic edge $k$-coloring $c(\cdot)$ of the graph $G$ using the color set $C = \{1, 2, \ldots, k\}$,
for a vertex $v\in V(G)$, let $C(v)$ denote the set of colors assigned to the edges incident at $v$ under $c(\cdot)$.
If the edges of a path $P= ux\ldots v$ are alternatively colored $i$ and $j$, we call it an {\em $(i, j)_{(u, v)}$-path}.
Furthermore, if $uv\in E(G)$ is also colored $i$ or $j$, we call $ux\ldots vu$ an {\em $(i, j)_{(u, v)}$-cycle}
(which implies $c(\cdot)$ is invalid and needs to be revised).

For simplicity, we use {\em $\{e_1, e_2, \ldots, e_m\} \to a$} to state that all the edges $e_1$, $e_2$, $\ldots$, $e_m$ are colored $a$,
use simply $e_1 \to a$ to state that $e_1$ is colored $a$,
use $e_1 \to S$ ($S \ne\emptyset$) to state that $e_1$ is colored with a color in $S$,
and use $(e_1, e_2, \ldots, e_m) \to (a_1, a_2, \ldots, a_m)$ to state that $e_j$ is colored $a_j$, for $j = 1, 2, \ldots, m$.
We also use $(e_1, e_2, \ldots, e_m)_c = (a_1, a_2, \ldots, a_m)$ to denote that $c(e_j) = a_j$, for $j = 1, 2, \ldots, m$;
that is, to emphasize the coloring is by $c(\cdot)$.
Moreover, we use {\em $i_1/i_2/\ldots/i_k\in S$} to state that at least one of $i_1, i_2, \ldots, i_k$ is in $S$, i.e., $S\cap \{i_1, i_2, \ldots, i_k\}\ne \emptyset$.
By a graph $H$ containing an $(S_1, S_2)_{(u, v)}$-path
(an $(a, S)_{(u, v)}$-path, an $(a, i_1/i_2/\ldots/i_k)_{(u, v)}$-path, respectively),
it means that $H$ contains an $(i, j)_{(u, v)}$-path for each pair $i\in S_1$, $j\in S_2$
(an $(a, j)_{(u, v)}$-path for each $j\in S$, an $(a, j)_{(u, v)}$-path for some $j\in \{i_1, i_2, \ldots, i_k\}$, respectively).

The following lemma is obvious (via a simple contradiction):
\begin{lemma}{\rm \cite{SWMW19}}
\label{lemma06}
Suppose $G$ has an acyclic edge coloring $c(\cdot)$,
and $P = uv_1v_2$-$\ldots$-$v_kv_{k+1}$ is a maximal $(a, b)_{(u, v_{k+1})}$-path with $c(uv_1) = a$ and $b \not\in C(u)$.
Then there is no $(a, b)_{(u, w)}$-path for any vertex $w \not\in V(P)$.
\end{lemma}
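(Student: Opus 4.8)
The plan is to argue by contradiction, exploiting the single structural feature that makes the claim \emph{obvious}: in a proper edge coloring every vertex is incident with at most one edge of each color, so once the starting vertex and the starting color are fixed, an alternating $(a,b)$-path is forced one edge at a time and is therefore unique. Concretely, I would suppose for contradiction that some $(a,b)_{(u,w)}$-path $Q = u w_1 w_2 \cdots w_m$ exists with $w = w_m \notin V(P)$, and show that $Q$ must in fact coincide with an initial segment of $P$, contradicting either $w \notin V(P)$ or the maximality of $P$.

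First I would pin down the first edge of $Q$. Since $Q$ is an $(a,b)$-path emanating from $u$, its first edge is colored $a$ or $b$; because $b \notin C(u)$, it must be colored $a$, and since $u$ carries at most one edge of color $a$, that edge is exactly $uv_1$, whence $w_1 = v_1$. Next I would run an induction along $Q$: having established $w_i = v_i$, the edge $w_i w_{i+1}$ of $Q$ must receive the color opposite to $c(v_{i-1}v_i)$, which is precisely the color that $P$ uses on $v_i v_{i+1}$; as $v_i$ has at most one edge of that color, we get $w_{i+1} = v_{i+1}$. Thus $Q$ follows $P$ vertex for vertex as far as both exist. Two cases then finish the argument. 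If $m \le k+1$, the induction gives $w = w_m = v_m \in V(P)$, contradicting $w \notin V(P)$. If $m > k+1$, then $Q$ agrees with $P$ up to $v_{k+1}$ and continues with an edge $v_{k+1} w_{k+2}$; because $Q$ is a simple path, $w_{k+2}$ does not already appear on $Q$, hence $w_{k+2} \notin V(P)$, and appending this edge to $P$ yields a strictly longer alternating $(a,b)$-path from $u$, contradicting the maximality of $P$. Either way we reach a contradiction, so no such $Q$ exists.

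The only delicate point is the second case, where one must be sure the putative extension edge genuinely lengthens $P$ rather than closing back onto $P$; simplicity of $Q$ supplies the fresh endpoint $w_{k+2}$, and the maximality of $P$ is then contradicted directly. Equivalently, and perhaps more transparently, I could phrase the whole proof through the subgraph $G_{a,b}$ induced by the edges colored $a$ or $b$: acyclicity makes $G_{a,b}$ a forest, properness bounds every degree in it by $2$, so its components are simple paths; the hypotheses $c(uv_1)=a$ and $b \notin C(u)$ make $u$ a degree-one endpoint, whence $P$ is exactly the path component of $G_{a,b}$ containing $u$, and every $(a,b)$-path issuing from $u$ stays inside that component, i.e.\ inside $V(P)$. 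I expect the forcing/uniqueness step to be the crux, with the maximality appeal and the simplicity of $Q$ handling the lone boundary case.
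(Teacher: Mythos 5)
Your argument is correct and is precisely the "simple contradiction" the paper alludes to without writing out (it cites \cite{SWMW19} and calls the lemma obvious): the proper coloring forces the $(a,b)$-alternating walk from $u$ to be unique edge by edge, so any $(a,b)_{(u,w)}$-path is an initial segment of the maximal path $P$, and your $G_{a,b}$ reformulation (a forest of maximum degree $2$ with $u$ a degree-one endpoint) packages the same uniqueness cleanly. No gaps; both the boundary case via maximality and the simplicity appeal for the fresh endpoint are handled correctly.
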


The rest of the section is devoted to the proof of Theorem~\ref{thm01}, by an induction on the number of edges $|E(G)|$.
The flow of the proof is depicted in Figure~\ref{fig07}.
\begin{figure}[h]
\begin{center}
\includegraphics[width=0.9\textwidth]{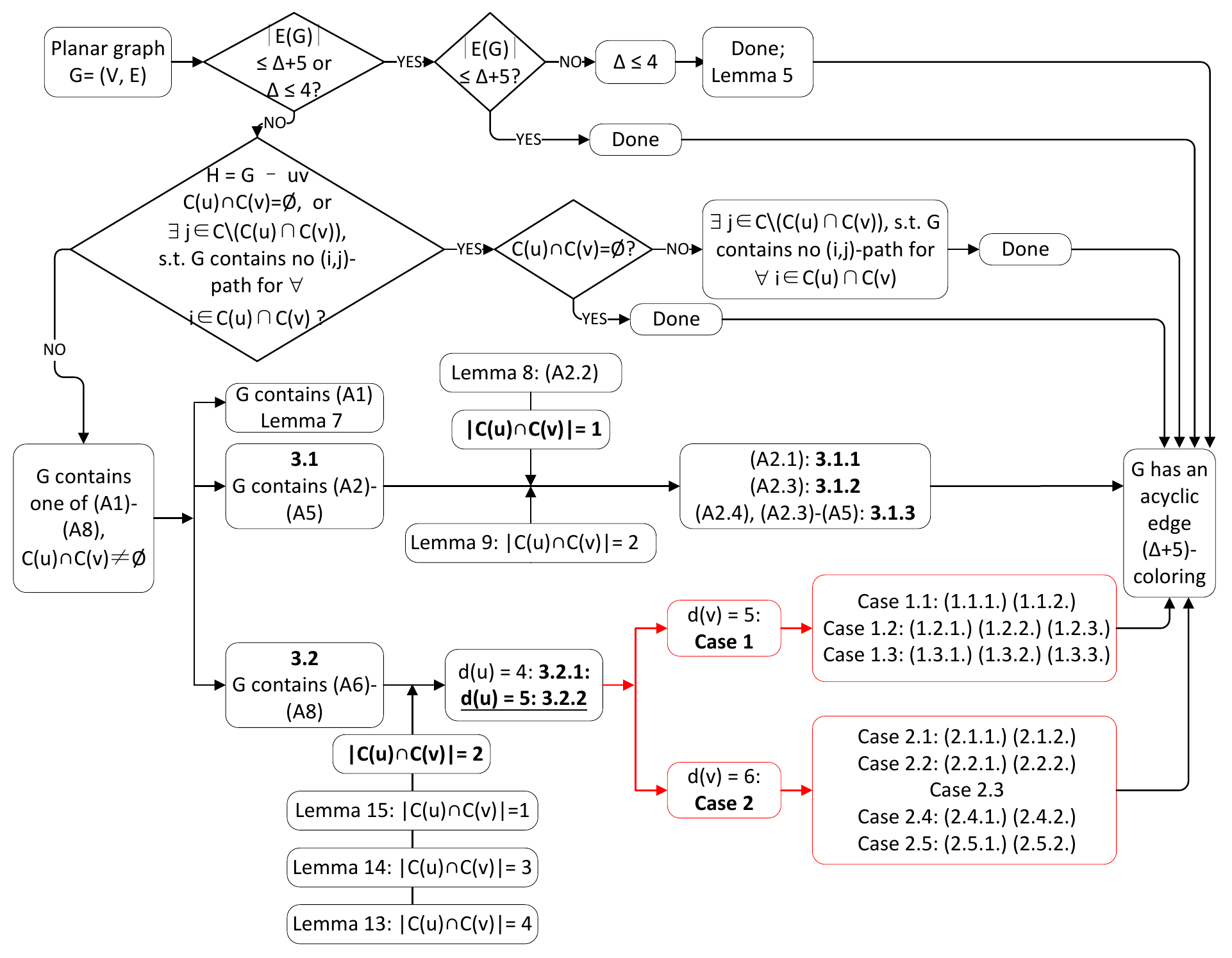}
\caption{The flow of the proof of Theorem~\ref{thm01} by an induction on $|E(G)|$.\label{fig07}}
\end{center}
\end{figure}

For the base cases in the induction,
one sees that when $|E(G)| \le \Delta + 5$, coloring each edge by a distinct color gives a valid acyclic edge $(\Delta + 5)$-coloring;
when $\Delta \le 4$, an acyclic edge $(\Delta + 5)$-coloring is guaranteed by Lemma~\ref{lemma05}.

\subsection*{3.0 \ Configuration ($A_1$), the simplest cases of the inductive step}
In the sequel we consider a simple $2$-connected planar graph $G$ such that $\Delta \ge 5$ and $|E(G)| \ge \Delta + 6$.
We first use some simplest cases to convince the readers that the inductive step works,
then starting Subsection 3.1 the inductive step becomes more complex yet using the similar ideas.
Theorem~\ref{thm02} states that $G$ contains at least one of the configurations ($A_1$)--($A_8$),
which are the cases to be discussed inside the inductive step.

As the most simplest case of the inductive step, consider the configuration ($A_{1.1}$) in $G$.
Let $G' = G - v + u w$, that is, by merging the two edges $u v$ and $v w$ into a new edge $u w$ while eliminating the vertex $v$.
The graph $G'$ is planar, $\Delta(G') = \Delta$, and contains one less edge than $G$.
Using the inductive hypothesis, $G'$ has an acyclic edge $(\Delta + 5)$-coloring $c(\cdot)$ using the color set $C = \{1, 2, \ldots, \Delta + 5\}$,
in which we assume that $c(u w) = 1$.
To construct an edge coloring for $G$,
we keep the color for every edge of $E(G') \setminus \{u w\}$, $v w \to 1$, and $u v \to C \setminus C(u)$.
One may trivially verify that no bichromatic cycle would be introduced, that is, the achieved edge $(\Delta + 5)$-coloring for $G$ is acyclic.

One sees that for the edge $uv$ in each of the other configurations,
we have the degree $d(v)\in \{2, 3\}$ (in ($A_{1.2}$), ($A_{1.3}$), ($A_{2}$)--($A_5$) respectively),
or the degree $d(u)\in \{4, 5\}$ (in ($A_6$)--($A_8$) respectively).
We pick the edge $uv$ and let $H = G - uv$ (that is, $H$ is obtained by removing the edge $uv$ from $G$, or precisely $H = (V(G), E(G) \setminus\{uv\})$).
The graph $H$ is planar, and $4 \le \Delta - 1 \le \Delta(H) \le \Delta$.%
\footnote{Note that $H$ might not be $2$-connected, but it does not matter.}
Using the {\em inductive hypothesis}, $H$ has an acyclic edge $(\Delta(H) + 5)$-coloring $c(\cdot)$,
and thus a $(\Delta + 5)$-coloring, using the color set $C = \{1, 2, \ldots, \Delta + 5\}$.

We have $d_H(u) + d_H(v) \le \Delta - 1 + 5 - 1 = \Delta + 3 < |C|$, or equivalently $C \setminus (C(u)\cup C(v)) \ne \emptyset$.
One sees that,
if $C(u)\cap C(v) = \emptyset$, then $uv\overset{\divideontimes}\to C\setminus (C(u)\cup C(v))$.
We define the notation $\overset{\divideontimes}\to$ to mean that $uv\to C\setminus (C(u)\cup C(v))$ gives an acyclic edge $(\Delta + 5)$-coloring for $G$ and thus we are done.
Or, if there exists a $j\in C\setminus (C(u)\cup C(v))$ such that, for all $i\in C(u)\cap C(v)$, $H$ contains no $(i, j)_{(u, v)}$-path,
then $uv\overset{\divideontimes}\to j$.
In the remaining case, for each $i\in C(u)\cap C(v)$, we assume that $c(u u_s) = c(v v_t) = i$,
and we define the color set
\[
B_i = \{j \mid \mbox{there is an } (i, j)_{(u_s, v_t)}\mbox{-path in } H\}.
\]
Therefore, for each $j \in B_i$, there is a neighbor $x$ of $u_s$ and a neighbor $y$ of $v_t$, respectively,
such that $c(u_s x) = c(v_t y) = j$, and each of $x$ and $y$ is incident with an edge colored $i$.
One sees that $B_i \subseteq C(u_s)\cap C(v_t)$.
In the sequel we continue the proof with the following Proposition~\ref{prop3001}%
\footnote{Throughout the paper, we characterize a lot of intermediate properties, but not final conclusions,
	that are useful for down-stream proofs into {\em propositions}.
	These propositions are subsection numbered in four levels.
	When citing them, we only use the 4-level labels, without mentioning ``Proposition''.},
or otherwise we are done:
\begin{proposition}
\label{prop3001}
$C(u)\cap C(v) \ne \emptyset$, and $C\setminus (C(u)\cup C(v))\subseteq \bigcup _{i\in C(u)\cap C(v)}B_i$.
\end{proposition}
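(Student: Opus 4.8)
The plan is to prove the proposition in its contrapositive form: the statement describes exactly the situation in which the two easy completion moves already introduced fail, so I would assume that neither move succeeds in coloring $uv$ and then read off the two asserted conclusions. Throughout I would use the counting bound $d_H(u)+d_H(v)\le \Delta+3<|C|$, which guarantees $C\setminus(C(u)\cup C(v))\ne\emptyset$.

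For the first conclusion, suppose toward a contradiction that $C(u)\cap C(v)=\emptyset$. Then the move $uv\overset{\divideontimes}{\to}C\setminus(C(u)\cup C(v))$ applies: coloring $uv$ by any $j\in C\setminus(C(u)\cup C(v))$ keeps the coloring proper, and any bichromatic cycle created through $uv$, say in colors $\{j,\ell\}$, would force $\ell\in C(u)\cap C(v)$, which is impossible. Hence the coloring of $G$ would be completed and we would be done, contradicting that we are in the continuing case. Therefore $C(u)\cap C(v)\ne\emptyset$.

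For the second conclusion, fix any $j\in C\setminus(C(u)\cup C(v))$. If $H$ contains no $(i,j)_{(u,v)}$-path for every $i\in C(u)\cap C(v)$, then $uv\overset{\divideontimes}{\to}j$ completes the coloring and we are done; so I may assume there is some $i\in C(u)\cap C(v)$ and an $(i,j)_{(u,v)}$-path $P$ in $H$. The key structural step is to locate the endpoints of $P$. Since $j\notin C(u)$ and $j\notin C(v)$, neither the first nor the last edge of $P$ can be colored $j$; being an alternating $(i,j)$-path, both must therefore be colored $i$. As $i\in C(u)\cap C(v)$ and a proper coloring assigns $i$ to a unique edge at each of $u$ and $v$, these edges are exactly $uu_s$ and $v_tv$, so $P=u\,u_s\cdots v_t\,v$. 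Deleting the two terminal edges leaves an alternating $(i,j)$-path from $u_s$ to $v_t$ in $H$, i.e.\ an $(i,j)_{(u_s,v_t)}$-path, so $j\in B_i\subseteq\bigcup_{i\in C(u)\cap C(v)}B_i$. Since $j$ was arbitrary, this gives $C\setminus(C(u)\cup C(v))\subseteq\bigcup_{i\in C(u)\cap C(v)}B_i$.

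The only delicate point is the endpoint analysis in the last paragraph: everything hinges on $j\notin C(u)\cup C(v)$ to force the terminal colors to be $i$ and hence to pin the endpoints to the unique $i$-edges $uu_s$ and $v_tv$. I would also note in passing that $u_s\ne v_t$ (otherwise two $i$-colored edges meet at that vertex) and that $P$ has length at least three (a length-two path $u$-$x$-$v$ would need both incident edges colored $i$, which is impossible), so the stripped-down middle really is a well-defined $(i,j)_{(u_s,v_t)}$-path. I expect this forcing argument to be the crux, while the two completion moves are routine applications of the $\overset{\divideontimes}{\to}$ notation.
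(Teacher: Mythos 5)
Your proposal is correct and follows essentially the same route as the paper: the two completion moves $uv\overset{\divideontimes}{\to}C\setminus(C(u)\cup C(v))$ (when $C(u)\cap C(v)=\emptyset$) and $uv\overset{\divideontimes}{\to}j$ (when no $(i,j)_{(u,v)}$-path exists for any $i\in C(u)\cap C(v)$) are exactly how the paper justifies stating \ref{prop3001} as the surviving case. Your endpoint analysis pinning the terminal edges of the $(i,j)_{(u,v)}$-path to $uu_s$ and $v_tv$ correctly supplies the step the paper leaves implicit in passing from such a path to membership in $B_i$.
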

For each degree $k\in \{2, 3, 4, 5\}$,
we collect the colors of the edges connecting $u$ and its $k$-neighbors ($k^-$-neighbors, respectively) into
$S_k(u) = \{c(u w) \mid w \in N(u) \setminus \{v\}, d(w) = k\}$ ($S_{k^-}(u) = \{c(u w) \mid w \in N(u) \setminus \{v\}, d(w) \le k\}$, respectively).

The second simplest cases would be the configurations ($A_{1.2}$) and ($A_{1.3}$), in which there is a path $u v w$ with $d(v)= 2$.
Let $u_1, u_2, \ldots, u_{d(u) - 2}$ be the other neighbors of $u$.
Using the acyclic edge coloring $c(\cdot)$ for $H = G - u v$,
by \ref{prop3001}, $c(vw) \in C(u)$, $H$ contains a $(c(v w), i)_{(u, w)}$-path for every $i \in C \setminus C(u)$,
and thus $(C\setminus C(u))\subseteq C(w) \setminus \{c(v w), c(u w)\}$.
Let $c(u u_i) = i$ for each $1 \le i \le d(u) - 2$ and $(v w, u w)_c= (1, d(u) - 1)$.
It follows that ($A_{1.3}$) holds and $[d(u), \Delta + 5]\subseteq C(w)$.
One sees that $|[d(u), \Delta + 5]|\ge \Delta + 5 - (d(u) - 1) = \Delta - d(u) + 6\ge 6$.
Since $n_{5^-}(u)\ge d(u) - 6$, $S_{5^-}(u)\ge d(u) - 7$ and $|S_{5^-}(u)\cup (C\setminus C(u))\cup \{1, d(u) - 1\}|\ge d(u) - 7 + (\Delta - d(u) + 6) + 2 = \Delta + 1$.
It follows that $6\in S_{5^-}(u)\setminus C(w)$.
Then $6\to v w$ and $[d(u), \Delta + 5]\subseteq C(u_6)\to u v$ give rise to an acyclic edge $(\Delta + 5)$-coloring for $G$, and we are done.

We have thus far demonstrated that the inductive step works for the case where $G$ contains the configuration ($A_1$),
i.e., one of the configurations ($A_{1.1}$)--($A_{1.3}$),
and summarize in the following lemma, where the running time refers to the time spent during the inductive step.

\begin{lemma}
\label{31}
If ($A_1$) holds, then in $O(1)$ time an acyclic edge $(\Delta + 5)$-coloring for $G$ can be obtained.
\end{lemma}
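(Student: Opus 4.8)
The plan is to verify the claim separately for each of the three sub-configurations $(A_{1.1})$, $(A_{1.2})$, $(A_{1.3})$ that constitute $(A_1)$, in each case extending an inductively obtained acyclic edge $(\Delta+5)$-coloring of a smaller planar graph to $G$ by recoloring only the constantly many edges near the path $uvw$. Because $d(v)=2$, the vertex $v$ lies on exactly the two edges $uv$ and $vw$, and this is what keeps every recoloring local and the inductive step $O(1)$ time.

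For $(A_{1.1})$, where $uw\notin E(G)$, I would contract the path by setting $G'=G-v+uw$, replacing $uv$ and $vw$ by a single edge $uw$. Since $uw$ was absent, $G'$ is a simple planar graph with $\Delta(G')=\Delta$ and $|E(G')|=|E(G)|-1$, so induction gives an acyclic edge $(\Delta+5)$-coloring $c$ of $G'$, which I normalize so that $c(uw)=1$. Extending to $G$ by keeping $c$ on $E(G')\setminus\{uw\}$, assigning $vw\to 1$, and choosing $uv\to C\setminus C(u)$ (nonempty as $|C(u)|\le\Delta<\Delta+5$) yields the desired coloring: every cycle of $G$ through $v$ uses both $uv$ and $vw$, and since $vw$ keeps the color of the contracted edge while $uv$ takes a color new at $u$, any bichromatic cycle in $G$ would project to one in $G'$.

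For $(A_{1.2})$ and $(A_{1.3})$, where $uw\in E(G)$, I would first observe that $(A_{1.2})$ falls under $(A_{1.3})$: as $v$ is a $2$-vertex adjacent to $u$ we have $n_{5^-}(u)\ge 1$, so $d(u)\le 7$ already forces $n_{5^-}(u)\ge 1\ge d(u)-6$. It then suffices to treat $(A_{1.3})$, for which I delete $uv$ to form $H=G-uv$ and invoke induction for a coloring $c$. If the coloring cannot already be extended to $uv$, Proposition~\ref{prop3001} forces $c(vw)\in C(u)$ together with a $(c(vw),i)_{(u,w)}$-path for every $i\in C\setminus C(u)$, so that $C\setminus C(u)\subseteq C(w)$. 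After relabeling colors so that $c(uu_i)=i$ and $(vw,uw)_c=(1,d(u)-1)$, this places the whole block $[d(u),\Delta+5]$ inside $C(w)$, alongside $1=c(vw)$ and $d(u)-1=c(uw)$.

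The crux is then a counting argument together with an acyclicity check. Using $n_{5^-}(u)\ge d(u)-6$ one bounds $|S_{5^-}(u)|\ge d(u)-7$, and since $(C\setminus C(u))\cup\{1,d(u)-1\}$ already sits inside $C(w)$ with $|C(w)|\le\Delta$, at most $d(u)-8$ colors of $S_{5^-}(u)$ can also lie in $C(w)$; hence some color of $S_{5^-}(u)$, say the one labeled $6$ carried by the edge $uu_6$ to a $5^-$-vertex $u_6$, is missing from $C(w)$. Recoloring $vw\to 6$ introduces no bichromatic cycle at $w$, and because $[d(u),\Delta+5]\subseteq C(u_6)$ one can free $uu_6$ and then color $uv$ to complete the coloring. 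I expect the main obstacle to be exactly this last verification — confirming that reusing color $6$ at both $u$ and the recolored edge $vw$ creates no new bichromatic cycle, and pinning down the disjointness needed in the counting — whereas the $O(1)$ running time is immediate, since only edges incident to $u$, $v$, $w$ are altered.
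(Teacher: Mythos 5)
Your overall route is the paper's: contract $uv,vw$ into a single edge $uw$ for $(A_{1.1})$, note that $(A_{1.2})$ is subsumed by $(A_{1.3})$ because $v$ itself is a $5^-$-neighbour of $u$, and for $(A_{1.3})$ delete $uv$, use Proposition~\ref{prop3001} to force $[d(u),\Delta+5]\cup\{1,d(u)-1\}\subseteq C(w)$, and then count to produce a colour of $S_{5^-}(u)$ missing from $C(w)$. All of that is sound. (The disjointness $S_{5^-}(u)\cap\{1,d(u)-1\}=\emptyset$ that your count silently uses does hold: Proposition~\ref{prop3001} places $[d(u),\Delta+5]$ inside both $C(u_1)$ and $C(w)$, so neither $u_1$ nor $w$ can be a $5^-$-vertex; the paper is equally terse on this point.)

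The final step, however, is wrong as written. You assert that $[d(u),\Delta+5]\subseteq C(u_6)$, but $u_6$ is a $5^-$-vertex, so $|C(u_6)|\le 5<6\le|[d(u),\Delta+5]|$ and this containment is impossible --- the whole reason for steering towards a $5^-$-neighbour is that the \emph{opposite} happens, i.e.\ $[d(u),\Delta+5]\setminus C(u_6)\ne\emptyset$. Moreover ``freeing $uu_6$'' is not the right operation: recolouring $uu_6$ is unnecessary and would itself have to be re-checked for new bichromatic cycles. The correct finish (and what the paper does) is: set $vw\to 6$, which is proper at $w$ since $6\notin C(w)$, and $uv\to j$ for any $j\in[d(u),\Delta+5]\setminus C(u_6)$, a nonempty set precisely because $d(u_6)\le 5$. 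Then $j\notin C(u)=[1,d(u)-1]$ and $j\ne 6$, so the colouring is proper; and any newly created bichromatic cycle must pass through the $2$-vertex $v$, hence be $(6,j)$-coloured, hence continue from $u$ along the unique edge $uu_6$ coloured $6$ and then leave $u_6$ on an edge coloured $j$ --- which does not exist since $j\notin C(u_6)$. This closes exactly the verification you left open.
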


The next two subsections deal with the configurations ($A_{2}$)--($A_{5}$) and ($A_6$)--($A_8$), respectively, and arranged as follows.

In Subsection 3.1, we first deal with ($A_{2.2}$) in Lemma~\ref{322},
and then $|C(u)\cap C(v)| = 2$ in Lemma~\ref{32to1},
which says that in $O(1)$ time either an acyclic edge $(\Delta + 5)$-coloring for $G$ can be obtained,
or an acyclic edge $(\Delta + 5)$-coloring for $H$ can be obtained such that $|C(u)\cap C(v)| = 1$;
next, we deal with ($A_{2.1}$) in Subsection 3.1.1, ($A_{2.3}$) in Subsection 3.1.2, and ($A_{2.4}$), ($A_{3}$)-- ($A_{5}$) in Subsection 3.1.3, respectively.

In Subsection 3.2, we first consider the condition where $|C(u)\cap C(v)| = 4$ in Lemma~\ref{4to321},
$|C(u)\cap C(v)| = 3$ in Lemma~\ref{3to21},
and $|C(u)\cap C(v)| = 1$ in Lemma~\ref{auv1coloring},
which says that in $O(1)$ time either an acyclic edge $(\Delta + 5)$-coloring for $G$ can be obtained,
or an acyclic edge $(\Delta + 5)$-coloring for $H$ can be obtained such that $|C(u)\cap C(v)| = 1$;
next, we deal with $d(u) = 4$ in Subsection 3.2.1, $d(u) = 5$ in Subsection 3.2.2, respectively.
In particular, in Subsection 3.2.2., we distinguish two cases for the two possible values $d(v)$.
It takes a long time in Subsection 3.2.2 to verify that if $G$ contains ($A_{8.1}$),
then an acyclic edge $(\Delta + 5)$-coloring for $H$ can be obtained.
We remark that this process is the most interesting.

\subsection{Configurations ($A_{2}$)--($A_{5}$)}
In this subsection we prove the inductive step for the case where $G$ contains one of the configurations ($A_{2}$)--($A_{5}$).
We have the following revised ($A_{2}$)--($A_{5}$):
An edge $u v$ with $d(v) = 3$ and $d(u)= k\le 11$ such that at least one of the configurations ($A_{2}$)--($A_{5}$) occurs.

\begin{lemma}
\label{322}
If ($A_{2.2}$) holds, then in $O(1)$ time an acyclic edge $(\Delta + 5)$-coloring for $G$ can be obtained.
\end{lemma}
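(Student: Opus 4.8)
The plan is to work with the revised form of ($A_{2.2}$): an edge $uv$ with $d(v)=3$, say $N(v)=\{u,v_1,v_2\}$, where $u$ is an $([8,11],5^-)$ vertex, $uv_1,uv_2\in E(G)$, and $v_1v_2\not\in E(G)$. First I would set $H = G - uv$ and invoke the inductive hypothesis to obtain an acyclic edge $(\Delta+5)$-coloring $c(\cdot)$ of $H$ using $C=\{1,\ldots,\Delta+5\}$. Since $d_H(v)=2$ and $d_H(u)=d(u)-1\le 10$, we have $|C(u)\cup C(v)|\le 10+2 < \Delta+5$ whenever $\Delta$ is not too small, so $C\setminus(C(u)\cup C(v))\ne\emptyset$. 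By \ref{prop3001} I may assume $C(u)\cap C(v)\ne\emptyset$ and that every free color lies in some $B_i$; otherwise we are immediately done by the $\overset{\divideontimes}\to$ mechanism. Because $d(v)=3$ and two of its edges $uv_1,uv_2$ survive in $H$, the set $C(v)$ in $H$ consists of exactly the two colors $c(vv_1),c(vv_2)$, so $C(u)\cap C(v)$ has size at most $2$; the goal is to uncolor/recolor so as to free up a color for $uv$.

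**Exploiting the triangles at $u$ and the low-degree structure.**
The key structural leverage is that $v_1,v_2$ are common neighbors of both $u$ and $v$ (they form two triangles $uvv_1$ is not forced, but $uv_1,uv_2$ and $vv_1,vv_2$ give the quadrilateral structure), and that $u$ has at most four $6^+$-neighbors (the $5^-$ condition) while $d(u)\le 11$. This means $u$ has many $5^-$-neighbors, whose colors sit in $S_{5^-}(u)$, and these give us room to perform Kempe-style recoloring along $(i,j)$-paths. Concretely, for each $i\in C(u)\cap C(v)$ I would locate the edges $uu_s$ and $vv_t$ colored $i$, examine $B_i$, and use Lemma~\ref{lemma06} to argue that an $(i,j)_{(u_s,v_t)}$-path forces one endpoint of that path to carry color $j$ at a neighbor of $u_s$ and of $v_t$ simultaneously. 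The plan is then to pick a free color $j$, recolor one of the triangle edges $vv_1$ or $vv_2$ (or an edge $uu_s$ to a $5^-$-neighbor) to break the offending bichromatic structure, and finally color $uv$ from the now-enlarged free set. Counting $|S_{5^-}(u)\cup(C\setminus C(u))|$ as in the ($A_{1.3}$) argument should show the relevant sets cannot cover $C$, producing a usable color.

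**The main obstacle.**
The hard part will be handling the case $|C(u)\cap C(v)|=2$ cleanly: here both of $v$'s remaining edge-colors are blocked, so a single free color $j$ must avoid creating a bichromatic cycle with respect to \emph{both} common colors $i_1,i_2$ simultaneously. Since $v$ lies in the two triangles through $v_1,v_2$, recoloring $vv_1$ or $vv_2$ interacts with the colors already present at $v_1,v_2$, and one must verify that after recoloring no new $(i,j)$-cycle is spawned through the triangle edges $uv_1,uv_2$. I expect the resolution to mirror the strategy announced for Lemma~\ref{32to1}: reduce the size-$2$ intersection to a size-$1$ intersection by a first recoloring step, after which a second free color can be assigned to $uv$. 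Establishing that at least one such reduction is always available — using the abundance of $5^-$-neighbors of $u$ to guarantee a color outside $C(v_1)\cup C(v_2)$ — is the crux, and the degree bound $d(u)\le 11$ together with $n_{6^+}(u)\le 5$ is precisely what makes the counting go through.
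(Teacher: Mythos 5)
Your reduction is not the one the paper uses, and as written it leaves the essential difficulty unresolved. The paper's proof of Lemma~\ref{322} does \emph{not} set $H=G-uv$; it exploits the defining feature of ($A_{2.2}$), namely $v_1v_2\notin E(G)$, to form $G'=G-v+v_1v_2$ (deleting the $3$-vertex $v$ and merging the path $v_1vv_2$ into a new edge). This drops two edges, keeps planarity and $\Delta$, and the inductive coloring of $G'$ is then lifted back by coloring the three edges $vv_1$, $vv_2$, $uv$, with $c(v_1v_2)$ guiding the case split ($c(v_1v_2)\notin C(u)$ versus $c(v_1v_2)\in C(u)$) and with the $5^-$-neighbors of $u$ supplying the counting $|S_{5^-}(u)|\ge k-6$ that produces a usable color outside $C(v_1)$ or $C(v_2)$. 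Your plan instead uses the generic edge-deletion $H=G-uv$ and hopes to ``mirror the strategy announced for Lemma~\ref{32to1}.'' But that machinery is built only \emph{after} ($A_{2.2}$) has been disposed of: the paper explicitly restricts the subsequent analysis to ``($A_2$)--($A_5$) other than ($A_{2.2}$),'' and the supporting Lemma~\ref{3akcycle} is stated and proved only for those cases. So you are leaning on lemmas whose hypotheses exclude exactly the configuration you are trying to handle; nothing in your proposal shows that their proofs survive when both $v_1$ and $v_2$ are neighbors of $u$ with $v_1v_2\notin E(G)$ and $d(u)\in[8,11]$.

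The concrete gap is in your last paragraph: you identify the crux (when $|C(u)\cap C(v)|=2$, a single free color must simultaneously avoid bichromatic cycles with respect to both common colors, and any recoloring of $vv_1$ or $vv_2$ interacts with $C(v_1)$, $C(v_2)$ and the triangle edges $uv_1,uv_2$) and then simply assert that ``the degree bound $d(u)\le 11$ together with $n_{6^+}(u)\le 5$ is precisely what makes the counting go through,'' without exhibiting the counting or the recoloring. Note also that your rough count $|C(u)\cup C(v)|\le 12<\Delta+5$ only guarantees a free color exists, not that one avoids all $B_i$; and your description of the local structure is off --- $uvv_1$ and $uvv_2$ \emph{are} two triangles sharing the edge $uv$ (since $uv_1,uv_2\in E(G)$), not a quadrilateral. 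A correct proof along your lines would have to redo, for this configuration, the full case analysis that the paper instead avoids by choosing the vertex-deletion reduction; as it stands the argument is a plan with its hardest step deferred.
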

\begin{proof}
Let $G' = G - v + v_1 v_2$, that is, merging the two edges $v_1 v$ and $v v_2$ into a new edge $v_1 v_2$ while eliminating the vertex $v$.
The graph $G'$ is a planar graph, $\Delta(G') = \Delta$, and contains two less edges than $G$.
By the inductive hypothesis, $G'$ has an acyclic edge $(\Delta + 5)$-coloring $c(\cdot)$ using the color set $C = \{1, 2, \ldots, \Delta + 5\}$,
in which $C(x)$ is the color set of the edges incident at $x\in \{u, v_1, v_2\}$,
and we assume that $(u v_1, u v_2)_c = (2, 3)$ and $c(u u_i) = i$, $i\in \{1, 4, 5, \ldots, k - 1\}$,
where $u_1, u_4, \ldots, u_{k - 1}$ are the neighbors of $u$ other that $v, v_1, v_2$.

To transform back an edge coloring for $G$,
we keep the color for every edge of $E(G') \setminus \{v_1 v_2\}$, and color $\{u v, v v_1, v v_2\}$ as follows by recoloring some edges incident to $u$ if needed.

{\bf Case 1.} $c(v_1 v_2)\not\in C(u)$, assuming w.l.o.g. $c(v_1 v_2) = k$.

When $[k + 1, \Delta + 5]\setminus C(v_2) \ne \emptyset$, $v v_1\to k$,
$v v_2\to \alpha\in [k + 1, \Delta + 5]\setminus C(v_2)$,
and $u v\overset{\divideontimes}\to [k + 1, \Delta + 5]\setminus \{\alpha\}$.

In the other case, $[k + 1, \Delta + 5]\subseteq C(v_2)$ and likewise, $[k + 1, \Delta + 5]\subseteq C(v_1)$.
Since $u$ is $([8, 11], 5^-)$, $|S_{5^-}(u)|\ge k - 6$.
One sees that $|S_{5^-}(u)\cup [k + 1, \Delta + 5]|\ge k - 6 + \Delta + 5 - k = \Delta - k -1$.
It follows that there exists an $i_0\in S_{5^-}(u)\setminus C(v_1)$.
Then $(v v_1, v v_2)\to (i_0, k)$ and $u v\overset{\divideontimes}\to [k + 1, \Delta + 5]\setminus C(u_{i_0})$.

{\bf Case 2.} $c(v_1 v_2)\in C(u)$, assuming w.l.o.g. $c(v_1 v_2) = 1$.

When $[k, \Delta + 5]\setminus C(v_i)\ne \emptyset$ for some $i\in \{1, 2\}$, assuming w.l.o.g. $k\not\in C(v_2)$,
if there is no $(1, j_0)_{(u_1, v_1)}$-path for some $j_0\in [k + 1, \Delta + 5]$,
then $(v v_1, v v_2, u v)\overset{\divideontimes}\to (1, k, j_0)$;
otherwise, $G$ contains a $(1, j)_{(u_1, v_1)}$-path for each $j\in [k + 1, \Delta + 5]\subseteq C(v_1)$.
Since $|[k + 1, \Delta + 5]| = \Delta - k + 5 \ge 5$ and $S_{5^-}(u)\ge k - 6$,
there exists an $j_1\in S_{5^-}(u)\setminus C(v_1)$.
Then $(v v_1, v v_2)\to (j_1, 1)$ and $u v\overset{\divideontimes}\to [k + 1, \Delta + 5]\setminus C(u_{j_1})$.

In the other case, $[k, \Delta + 5]\subseteq C(v_1)\cap C(v_2)$.
Note that $|[k, \Delta + 5]| = \Delta - k + 6\ge 6$.
When there exists $j_2\in S_{5^-}(u)\setminus (C(v_1)\cup C(v_2))$,
let $\beta\in [k, \Delta + 5]\setminus C(u_{j_2})$,
by Lemma~\ref{lemma06}, $(v v_1, v v_2, u v)\to (1, j_2, \beta)$ or $(v v_1, v v_2, u v)\to (j_2, 1, \beta)$ gives rise to an acyclic edge $(\Delta + 5)$-coloring for $G$.

Now we assume that $S_{5^-}(u)\subseteq (C(v_1)\cup C(v_2))$.
Since $2 + |[k, \Delta + 5]| + 2 + |[k, \Delta + 5]| + |S_{5^-}(u)| = 2\Delta + 10 - k \le d(v_1) + d(v_2)\le 2\Delta$, we have $k\ge 10$.
Since $2 + |[k, \Delta + 5]| + |S_{5^-}(u)| = 2 + \Delta + 6 - k + k - 6 = \Delta + 2$,
we have $|S_{5^-}(u)\setminus C(v_1)|\ge 2$ and $|S_{5^-}(u)\setminus C(v_2)|\ge 2$.

For each $i_1\in S_{5^-}(u)\setminus C(v_1)$ and $i\in C\setminus C(v_2)$,
if $G$ contains no $(1, j)_{(v_2, u_1)}$-path for some $j\in [k, \Delta + 5]\setminus C(u_{i_1})$,
then $(v v_1, v v_2, u v)\overset{\divideontimes}\to (i_1, 1, j)$;
otherwise, $G$ contains a $(1, j)_{(v_2, u_1)}$-path for each $j\in [k, \Delta + 5]\setminus C(u_{i_1})$;
if $G$ contains no $(i, j)_{(v_2, u)}$-path for some $j\in [k, \Delta + 5]\setminus C(u_{i_1})$,
then $(v v_1, v v_2, u v)\overset{\divideontimes}\to (1, i, j)$.
In the other case, we proceed with the following proposition:

\begin{proposition}
\label{prop3102}
\begin{itemize}
\parskip=0pt
\item[{\rm (1)}]
	for each $i_1\in S_{5^-}(u)\setminus C(v_1)$ and $i\in C\setminus C(v_2)$,
    $G$ contains a $(1, j)_{(v_2, u_1)}$-path and a $(i, j)_{(v_2, u_{i_2})}$-path for each $j\in [k, \Delta + 5]\setminus C(u_{i_1})$;
\item[{\rm (2)}]
	for each $i_2\in S_{5^-}(u)\setminus C(v_2)$ and $i\in C\setminus C(v_1)$,
    $G$ contains a $(1, j)_{(v_1, u_1)}$-path and a $(i, j)_{(v_1, u_{i_1})}$-path for each $j\in [k, \Delta + 5]\setminus C(u_{i_2})$;
\item[{\rm (3)}]
	for each $i_1\in S_{5^-}(u)\setminus C(v_1)$ and for each $i_2\in S_{5^-}(u)\setminus C(v_2)$,
    $[k, \Delta + 5]\setminus C(u_{i_1})\subseteq C(u_{i_2})$,
    $[k, \Delta + 5]\setminus C(u_{i_2})\subseteq C(u_{i_1})$,
    and $[k, \Delta + 5]\subseteq C(u_{i_1})\cup C(u_{i_2})$.
\end{itemize}
\end{proposition}

For each $i_2\in S_{5^-}(u)\setminus C(v_2)$,
when for some $j\in [k, \Delta + 5]\setminus C(u_{i_2})$, $G$ contains no $(i, j)_{(u_{i_2}, u)}$-path for any $i\in C(u)$,
it follows from \ref{prop3102} that $G$ contains a $(1, j)_{(v_1, u_1)}$-path.
If $G$ contains no $(1, l)_{(u_1, v_1)}$-path for some $[k, \Delta + 5]\setminus \{j\}$,
then $(v v_1, v v_2, u u_{i_2}, u v)\overset{\divideontimes}\to (1, i_2, j, l)$.
Otherwise, $G$ contains a $(1, l)_{(u_1, v_1)}$-path for any $l\in [k, \Delta + 5]$.
Then $(v v_1, v v_2, u v)\overset{\divideontimes}\to (i_1, 1, l)$.
In the other case, we proceed with the following proposition:

\begin{proposition}
\label{prop3103}
\begin{itemize}
\parskip=0pt
\item[{\rm (1)}]
    For each $j\in [k, \Delta + 5]\setminus C(u_{i_2})$, where $i_2\in S_{5^-}(u)\setminus C(v_2)$, $G$ contains a $(i, j)_{(u_{i_2}, u)}$-path for some $i\in C(u)$.
\item[{\rm (2)}]
    For each $j\in [k, \Delta + 5]\setminus C(u_{i_1})$, where $i_1\in S_{5^-}(u)\setminus C(v_1)$, $G$ contains a $(i, j)_{(u_{i_1}, u)}$-path for some $i\in C(u)$.
\end{itemize}
\end{proposition}

It follows that $|C(u_{i_1})\cap [k, \Delta + 5]|\le 3$ and $|C(u_{i_2})\cap [k, \Delta + 5]|\le 3$.
Since $|[k, \Delta + 5]|\ge 6$, we have $|C(u_{i_1})\cap [k, \Delta + 5]|= 3$, $|C(u_{i_2})\cap [k, \Delta + 5]|= 3$, and $\Delta = k$.
Further, by Propositions~\ref{prop3102} and \ref{prop3103},
we assume that for each $i_2\in S_{5^-}(u)\setminus C(v_2)$, $C(u_{i_2}) = \{a_{i_2}, k, k + 1, k + 2\}$;
and for each $i_1\in S_{5^-}(u)\setminus C(v_1)$, $C(u_{i_1}) = \{a_{i_1}, k + 3, k + 4, k + 5\}$.

We consider below the case where $k = 10$ and $k = 11$, respectively.

{Case 2.1.} $k = 10$, and $C(v_1) = \{2, 6, 7\}\cup [10, 16]$,
$C(v_2) = \{3, 8, 9\}\cup [10, 16]$.
Then $C(u_6) = \{a_6, 10, 11, 12\}$, $C(u_7) = \{a_7, 10, 11, 12\}$,
and by Propositions~\ref{prop3102}, $G$ contains a $(i, 15)_{(u, v_1)}$-path for any $i\in \{1, 3, 4, 5, 8, 9\}$.
However, by Propositions~\ref{prop3103}, $G$ contains a $(2, 15)_{(v_1, u_6)}$-path and a $(2, 15)_{(v_1, u_7)}$-path, a contradiction.

{Case 2.2.} $k = 11$.
Assume w.l.o.g. $C(v_1) = \{2, 6, 7, 10\}\cup [11, \Delta + 5]$,
$\{3, 8, 9\}\cup [11, \Delta + 5]\subseteq C(v_2)$.
Then $C(u_6) = \{a_6, 11, 12, 13\}$, $C(u_7) = \{a_7, 11, 12, 13\}$,
and $C(u_8) = \{a_8, 14, 15, 16\}$, $C(u_9) = \{a_9, 14, 15, 16\}$,
where $a_i\in C(u)$ for $i \in [6, 9]$.
It follows from Propositions~\ref{prop3102} that $G$ contains a $(i, 15)_{(u, v_1)}$-path for any $i\in \{1, 3, 4, 5, 8, 9\}$ and a $(1, 11)_{(u_1, v_2)}$-path.

Hence, by Propositions~\ref{prop3103}, $G$ contains a $(i, 15)_{(u, u_6)}$-path and a $(j, 15)_{(u, u_7)}$-path for some $i, j\in \{2, 10\}$.
By Lemma~\ref{lemma03}, we assume w.l.o.g. $G$ contains a $(2, 15)_{(v_1, u_6)}$-path and a $(10, 15)_{(u_{10}, u_7)}$-path,
and $C(u_6) = \{2, 6, 11, 12, 13\}$, $C(u_7) = \{7, 10, 11, 12, 13\}$.
It follows from Propositions~\ref{prop3103} that $G$ contains a $(10, j)_{(u_{10}, u_7)}$-path for each $j\in \{14, 15, 16\}$, and assume w.l.o.g. $11\not\in C(u_{10})$.

If $10\not\in C(v_2)$, then $(v v_1, v v_2, u v)\overset{\divideontimes}\to (1, 10, 11)$.
Otherwise, 
$C(v_2) = \{3\}\cup [8, 16]$.
It follows from Propositions~\ref{prop3102} that $G$ contains a $(i, 11)_{(v_2, u)}$-path for each $i\in \{1, 2, 4, 5, 6, 7\}$.
However, by Propositions~\ref{prop3103}, $G$ contains a $(3, 11)_{(v_2, u_8)}$-path and a $(3, 11)_{(v_2, u_8)}$-path, a contradiction.
\end{proof}

By Lemma \ref{322}, we consider below the case where $G$ contains the configuration ($A_{2}$)--($A_{5}$) other that ($A_{2.2}$).
Let $u_1$, $u_2$, $\ldots$, $u_{k - 1}$ be the other neighbors of $u$ other than $v$,
and assume that $c(u u_i)= i$ for $i = 1, 2, \ldots, k - 1$, i.e., $C(u) = [1, k - 1]$;
let $v_1$, $v_2$ be the neighbors of $v$ other than $u$.

Before we discuss the detailed configurations,
we do not distinguish $v_1$ and $v_2$ but use $x_1$ to refer to any one of $u_1$ and $u_2$ while $x_2$ to refer to the other ($\{x_1, x_2\} = \{u_1, u_2\}$).
In particular, $u_1 = x_1$ in ($A_{2.3}$), ($A_{2.4}$), and ($A_{3}$)--($A_{5}$), and $u_2 = x_2$ in ($A_{2.4}$), and ($A_{3}$)--($A_{5}$).
It follows from \ref{prop3001} that $1\le|C(u)\cap C(v)|\le 2$,
and $C\setminus (C(u)\cup C(v))\subseteq \bigcup _{i\in C(u)\cap C(v)}B_i$.

\begin{lemma}
\label{32to1}
If $|C(u)\cap C(v)| = 2$, then in $O(1)$ time either an acyclic edge $(\Delta + 5)$-coloring for $G$ can be obtained,
or an acyclic edge $(\Delta + 5)$-coloring for $H$ can be obtained such that $|C(u)\cap C(v)| = 1$.
\end{lemma}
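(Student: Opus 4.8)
The plan is to reduce the number of common colors by recoloring a single edge at $v$. Write $C(u) = [1, k-1]$ and $C(v) = \{i_1, i_2\} \subseteq C(u)$ with $c(vv_1) = i_1$ and $c(vv_2) = i_2$; achieving $|C(u)\cap C(v)| = 1$ amounts to recoloring $vv_1$ (or, symmetrically, $vv_2$) with a color $\alpha \in C \setminus C(u)$, since then $C(u)\cap C(v)$ shrinks to $\{i_2\}$. Because $\alpha \notin C(u) \ni i_2$, propriety at $v$ is automatic, so the recoloring $vv_1 \to \alpha$ is admissible precisely when $\alpha \notin C(v_1)$ and no bichromatic cycle is created. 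First I would observe that any new bichromatic cycle must use the recolored edge $vv_1$, hence pass through $v$, and therefore use exactly the two colors $\alpha$ and $i_2$; since $\alpha \notin C(v_1)$ forces the cycle to leave $v_1$ along an $i_2$-edge, such a cycle exists only if $i_2 \in C(v_1)$ and $H$ contains an $(\alpha, i_2)_{(v_1, v_2)}$-path whose end-edges at $v_1$ and $v_2$ are colored $i_2$ and $\alpha$. This immediately disposes of the easy sub-case $i_2 \notin C(v_1)$ (and, symmetrically, $i_1 \notin C(v_2)$): there every admissible $\alpha \in (C\setminus C(u))\setminus C(v_1)$ works, and the recolored $H$ has $|C(u)\cap C(v)| = 1$.

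Next I would treat the generic sub-case $i_2 \in C(v_1)$ and $i_1 \in C(v_2)$ by a counting-and-path argument. The candidate colors for $vv_1$ are $F' = (C\setminus C(u))\setminus C(v_1)$; since $|C\setminus C(u)| = \Delta - k + 6 \ge 6$, there is room unless $C(v_1)$ nearly exhausts the palette. A candidate $\alpha$ fails only when an $(\alpha, i_2)_{(v_1, v_2)}$-path is present, and each such path terminates at $v_2$ in a distinctly colored $\alpha$-edge; by Lemma~\ref{lemma06} each maximal $(\alpha, i_2)$-path is essentially unique, so these blocking edges at $v_2$ cannot be reused, which bounds the number of blocked candidates. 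Combining this with the symmetric option of recoloring $vv_2$ (blocked only by the $(\beta, i_1)_{(v_2, v_1)}$-paths), I would argue that at least one of the two edges admits a safe recoloring whenever at least one of $v_1, v_2$ has moderate degree.

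The remaining, genuinely hard, case is when both $v_1$ and $v_2$ have degree close to $\Delta$, so that $C(v_1)$ and $C(v_2)$ jointly swamp the palette and every candidate recoloring appears obstructed. Here I would invoke the structural input of \ref{prop3001}: every free color lies in $B_{i_1} \cup B_{i_2} \subseteq (C(u_{i_1})\cap C(v_1)) \cup (C(u_{i_2})\cap C(v_2))$, pinning the free colors to $v_1$ and $v_2$ through $u_{i_1}$ and $u_{i_2}$. The idea is that if all recolorings of both $vv_1$ and $vv_2$ are obstructed, the resulting family of $(\alpha, i_2)$- and $(\beta, i_1)$-paths, intersected with the $B$-paths of \ref{prop3001}, forces two distinct maximal bichromatic paths to share an end-edge, contradicting Lemma~\ref{lemma06}; locating a free color $j$ that escapes this coincidence then lets us color $uv \to j$ directly, which is the first alternative in the statement. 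I expect the main obstacle to be exactly this bookkeeping --- tracking which colors are simultaneously forbidden at $v_1$, $v_2$, $u_{i_1}$ and $u_{i_2}$ and showing these forbidden sets cannot cover $C$ without either exposing a reusable Kempe chain (hence a safe recoloring) or a colorable slot for $uv$.
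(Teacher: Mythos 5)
Your opening move coincides with the paper's: it too first tries $vx_i\to j$ for some $j\in[k,\Delta+5]\setminus C(x_i)$, and when no such $j$ exists it records precisely your ``hard case'' as \ref{prop3104}, namely $[k,\Delta+5]\subseteq C(x_1)\cap C(x_2)$ together with $[k,\Delta+5]\subseteq C(u_a)\cup C(u_b)$ from \ref{prop3001}. The genuine gap is that this hard case is essentially the entire lemma, and the way you propose to close it cannot work. First, once $[k,\Delta+5]\subseteq C(x_1)\cap C(x_2)$ your candidate set $F'=(C\setminus C(u))\setminus C(v_i)$ is empty for \emph{both} neighbours, so there are no candidates to count as blocked or unblocked: no recoloring of $vv_1$ or $vv_2$ by a color outside $C(u)$ is even proper. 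Second, even when $F'\ne\emptyset$, your Lemma~\ref{lemma06} bookkeeping only shows that distinct blocked candidates consume distinct edges at $v_2$, which bounds their number by $d(v_2)\le\Delta$; since $|C\setminus C(u)|=\Delta-k+6\le\Delta$, this comparison never produces a surviving candidate. Nothing in Lemma~\ref{lemma06} or \ref{prop3001} alone prevents every free color from lying in both $C(x_1)$ and $C(x_2)$, so the coincidence of maximal bichromatic paths you hope to exploit need not occur.

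What actually resolves the hard case in the paper is the configuration structure that your argument never invokes. The lemma is proved only for the configurations ($A_{2.1}$), ($A_{2.3}$), ($A_{2.4}$), ($A_3$)--($A_5$), where $d(u)=k\le 11$ and $u$ is guaranteed many $5^-$-neighbours (e.g.\ $|S_{5^-}(u)|\ge k-6$). Feeding $[k,\Delta+5]\subseteq C(x_1)\cap C(x_2)$ into $d(x_1)+d(x_2)\le 2\Delta$ shows that only a handful of colors of $[1,k-1]$ can appear around $x_1$ and $x_2$ together, which yields a color $i\in S_{5^-}(u)\setminus(C(x_1)\cup C(x_2))$ (or forces $\Delta=k$ and pins down $C(u_i)$ for the low-degree neighbours almost completely); recoloring $vx_1$ or $vx_2$ to such an $i$ and using $|C(u_i)|\le 5$ then either frees a color for $uv$ or reduces to $|C(u)\cap C(v)|=1$. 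Along the way the paper also recolors edges incident to $u$ and the edge $x_1x_2$ --- moves outside the scope of your plan, which only ever touches $vv_1$ and $vv_2$. Without importing these configuration hypotheses and the associated degree counting, the statement is not reachable by the path-uniqueness argument you sketch.
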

\begin{proof}
Assume w.l.o.g. that $(v x_1, v x_2)_c= (a, b)$, where $a, b\in [1, k - 1]$.
When there exists a color $j\in [k, \Delta + 5]\setminus C(x_i)$ for some $i \in \{1, 2\}$,
$v x_i\to j$ gives rise to an acyclic edge $(\Delta + 5)$-coloring for $H$ such that $|C(u)\cap C(v)| = 1$.
In the other case, we proceed with the following proposition:
\begin{proposition}
\label{prop3104}
{\rm (1)}\ $[k, \Delta + 5]\subseteq C(u_a)\cup C(u_b)$; {\rm (2)}\ $[k, \Delta + 5]\subseteq C(x_1)\cap C(x_2)$.
\end{proposition}

{\bf Case 1.} ($A_{2.1}$) holds, i.e., $d(u) = 7$, and $C(u)= [1, 6]$.

Assuming w.l.o.g. $\{a, b\} = \{1, 2\}$.
It follows from \ref{prop3104} that $C(x_1) = \{1\}\cup [7, \Delta + 5]$ and $C(x_2) = \{2\}\cup [7, \Delta + 5]$.
Since $7\in B_1\cup B_2$, assume w.l.o.g. $7\in B_1$.

If $G$ contains no $(i, 7)_{(x_1, u_i)}$-path for some $i\in \{2, 3\}$,
then $(v x_1, v x_2, u v)\overset{\divideontimes}\to (i, 1, 7)$.
Otherwise, for each $i\in \{2, 3\}$, $G$ contains a $(i, 7)_{(x_1, u_i)}$-path and $7\in C(u_i)$.
Likewise, we have $[7, \Delta + 5]\subseteq C(u_1)\cap C(u_2)\cap C(u_3)$ and thus $C(u_i) = \{i\}\cup [7, \Delta + 5]$ for every $i\in \{1, 2, 3\}$.
By Lemma~\ref{lemma06}, there is neither a $(2, 7)_{(u_2, x_2)}$-path nor a $(3, 7)_{(u_1, x_1)}$-path,
and $(u u_1, u u_3, v x_1, u v)\overset{\divideontimes}\to (3, 1, 3, 7)$.

{\bf Case 2.} ($A_{2.3}$) holds, i.e., $u$ is $(8, 4^-)$, $v_2$ is $([8, 11], 5^-)$, and $u v_1, v_1 v_2\in E(G)$, $u v_2\not\in E(G)$.

Assuming w.l.o.g. $c(vx_1) = 2$ and $c(v x_2)\in C(u)\setminus \{2\}$.
It follows from \ref{prop3104} that $C(x_1) = \{1, 2\}\cup [8, \Delta + 5]$ with $c(x_1 x_2) = 8$,
and $\{c(v x_2)\}\cup [8, \Delta + 5]\subseteq C(x_2)$.

When $c(v x_2)\in C(u)\setminus \{1, 2\}$, assuming w.l.o.g. $c(v x_2) = 3$,
since $8\in B_2\cup B_3$ and $3\not\in C(x_1)$, we have $2\in C(x_2)$ and $C(x_2) = \{2, 3\}\cup [8, \Delta + 5]$;
then $(v x_1, u v)\overset{\divideontimes} \to (4, 8)$.

In the other case, $c(v x_2) = 1$.
One sees from $|[3, 7]\cap C(x_2)|\le 1$ that there exists an $i\in [3, 7]\setminus C(x_2)$.
Then $v x_2\to i$ reduces to the case where $c(v x_2)\in C(u)\setminus \{1, 2\}$.

{\bf Case 3.} ($A_{2.4}$) holds, i.e., $d(u) = 8$, and $u v_1, u v_2, v_1 v_2\in E(G)$, $C(u) = [1, 7]$.

It follows from \ref{prop3102} that $C(x_1) = \{1, a\}\cup [8, \Delta + 5]$ and $C(x_2) = \{2, b\}\cup [8, \Delta + 5]$ with $c(x_1 x_2) = 8$.
Since $8\in B_a\cup B_b$, we assume w.l.o.g. $a = 2$ and $b = 3$.
Then $(v x_1, u v)\overset{\divideontimes} \to (4, 8)$.

{\bf Case 4.} ($A_{3}$)--($A_{5}$) holds, i.e., $u$ is $([9, 11], 5^-)$, and $u v_1, u v_2, v_1 v_2\in E(G)$.

Assuming w.l.o.g. $u_i$ be a $6^+$-vertex, $i\in [3, 5]$, and $u_i$ be a $5^-$-vertex, $i\in [6, k - 1]$.
It follows from \ref{prop3102} that
$[k, \Delta + 5] \subseteq C(x_1)$ and $[k, \Delta + 5] \subseteq C(x_2)$.

For any $i\in \{a, b\}\cap S_{5^-}(u)$, say $6\in \{a, b\}$,
when $C(u_6)\setminus\{6\}\subseteq [k, \Delta + 5]$, $u u_6\to [k, \Delta + 5]\setminus C(u_6)$ and an acyclic edge $(\Delta + 5)$-coloring for $H$ such that $|C(u)\cap C(v)| = 1$ is obtained.
In the other case, we proceed with the following proposition:
\begin{proposition}
\label{prop3105}
For any $i\in \{a, b\}\cap S_{5^-}(u)$, $|C(u_i)\setminus [k, \Delta + 5]|\ge 2$ and $|C(u_i)\cap [k, \Delta + 5]|\le 3$.
\end{proposition}

Since $|\{a, b\}\cap \{1, 2\}|\le 1$, $|[3, k - 1]\cap \{a, b\}|\ge 1$ and we discuss the following three
subcases for three possible values of $\{a, b\}$, respectively.

{Case 4.1.} $a, b\in S_{5^-}(u)$, assuming w.l.o.g. $a = 6$, $b = 7$.
Since $|[k, \Delta + 5]|\ge 6$,
it follows from \ref{prop3105} that $\Delta = k$ and assume $C(u_6) = \{a_6, k, k + 1, k + 2\}$,
$C(u_7) = \{a_7, k + 3, k + 4, k + 5\}$.

When $7\not\in C(x_1)$, first $v x_1\to 7$,
if $6\not\in C(x_2)$, then $(v x_2, uv)\overset{\divideontimes}\to (6, k)$;
otherwise, $6\in C(x_2)$.
Since $[k, k + 5]\cup \{2\}\subseteq C(x_2)$, there exists $\beta\in [8, k - 1]\setminus C(x_2)$.
Then $v x_2\to \beta$ and $[k, k + 5]\setminus C(u_\beta)\overset{\divideontimes}\to uv$.

In the other case, $7\in C(x_1)$ and likewise, $6\in C(x_2)$.
For some $i\in (\{3, 4, 5\}\cup [8, k - 1])\setminus (C(x_1)\cup C(x_2))$,
if $G$ contains no $(i, j)_{(u_i, x_1)}$-path for some $j\in [k, k + 2]$,
then $(v x_1, u v)\overset{\divideontimes}\to (i, j)$;
if $G$ contains no $(i, j)_{(u_i, x_2)}$-path for some $j\in [k + 3, k + 5]$,
then $(v x_2, u v)\overset{\divideontimes}\to (i, j)$.
Otherwise, $G$ contains a $(i, j)_{(u_i, x_1)}$-path for each $j\in [k, k + 2]$ and a $(i, j)_{(u_i, x_2)}$-path for each $j\in [k + 3, k + 5]$.
Thus, $[k, k + 5]\subseteq C(u_i)$ and $d(u_i)\ge 7$.
It follows that $[8, k - 1]\subseteq C(x_1)\cup C(x_2)$.

(4.1.1.)\ $c(x_1 x_2)\in \{3, 4, 5\}$, say $c(x_1x_2) = 3$.
Since $2\times |[k, k + 5]| + |\{1, 3, 6, 7\}| + |\{2, 3, 6, 7\}| + |[8, k - 1]| \le 2\Delta = 2k$,
we have $k\ge 12$, a contradiction.

(4.1.2.)\ $c(x_1 x_2)\in [8, k - 1]$, say $c(x_1 x_2) = 8$.
Since $2\times |[k, k + 5]| + |\{1, 6, 7, 8\}| + |\{2, 6, 7, 8\}| + |[9, k - 1]| \le 2\Delta = 2k$,
we have $k\ge 11$.
It follows that $k = 11$, and $C(x_1) = \{1, 6, 7, 8, 9\}\cup [k, k + 5]$,
$C(x_2) = \{2, 6, 7, 8, 10\}\cup [k, k + 5]$.
Since $G$ contains a $(3, j)_{(u_3, x_1)}$-path for each $j\in [k, k + 2]$ and a $(3, j)_{(u_3, x_2)}$-path for each $j\in [k + 3, k + 5]$,
thus $x_1 x_2\to 3$ reduces to the case where $8\not\in C(x_1)\cup C(x_2)$.

(4.1.3.)\ $c(x_1 x_2)\in [k, k + 5]$, say $c(x_1x_2) = k$.
Since $k\le 11$ and $2\times |[k, k + 5]| + |\{1, 6, 7\}| + |\{2, 6, 7\}| + |[8, k - 1]| \le 2\Delta = 2k$, i.e., $10 + k\le 2k$,
we have $|\{3, 4, 5\}\cap (C(x_1)\cup C(x_2))|\le 1$.
Assuming w.l.o.g. $3, 4\not\in C(x_1)\cup C(x_2)$.
Then $(v x_1, v x_2, u v)\overset{\divideontimes}\to (3, 4, k)$.

{Case 4.2.} $\{a, b\}\cap S_{5^-}(u)\ne \emptyset$, assuming w.l.o.g. $a = 6$, $b \in \{1, 3\}$.

When $6\not\in C(x_2)$,
since $[k, \Delta + 5]\cup \{1, 6\}\subseteq C(x_1)$ and $|[k, \Delta + 5]\cup \{1, 6\}| + |[7, k - 1]| = \Delta + 1$, there exists $\gamma\in [7, k - 1]\setminus C(x_1)$.
Then $(v x_2, v x_1)\to (6, \gamma)$ reduces the proof to Case 4.1.

In the other case, $6\in C(x_2)$.
Assume $[7, k - 1]\setminus (C(x_1)\cup C(x_2))\ne \emptyset$, say $7\not\in C(x_1)\cup C(x_2)$,
then $v x_2\to 7$ reduces the proof to Case 4.1.
Now assume that $[7, k - 1]\subseteq C(x_1)\cup C(x_2)$.

One sees that $k\le 11$ and $2\times |[k, \Delta + 5]| + |\{1, 6\}| + |\{2, b, 6\}| + |[7, k - 1]| = 2\Delta + 10 - k \le 2\Delta$.

When $b = 1$, it follows that $\{3, 4, 5\}\setminus (C(x_1)\cup C(x_2))\ne \emptyset$,
say $3\not\in C(x_1)\cup C(x_2)$,
thus $v x_2\to 3$ reduces the proof to the case where $b\in \{3, 4, 5\}$.
Hence, it suffices to assume that $b = 3$.

(4.2.1.)\ $c(x_1 x_2)\in \{4, 5\}$, say $c(x_1 x_2) = 4$.
Since $2\times |[k, \Delta + 5]| + |\{1, 4, 6\}| + |\{2, 3, 4, 6\}| + |[7, k - 1]| = 2\Delta + 12 - k \le 2\Delta$,
we have $k\ge 12$, a contradiction.

(4.2.2.)\ $c(x_1 x_2)\in [7, k - 1]$, say $c(x_1 x_2) = 7$.
Since $2\times |[k, \Delta + 5]| + |\{1, 6, 7\}| + |\{2, 3, 6, 7\}| + |[8, k - 1]| = 2\Delta + 11 - k \le 2\Delta$,
we have $k\ge 11$.
It follows that $k = 11$, and $C(x_1) = \{1, 6, 7, 8, 9\}\cup [11, \Delta  + 5]$,
$C(x_2) = \{2, 3, 6, 7\}\cup [10, \Delta  + 5]$.
It follows from \ref{prop3105} that $|C(u_6)\cap [11, \Delta + 5]|\le 3$.
Let $[14, \Delta + 5]\cap C(u_6) = \emptyset$.
Then $[14, \Delta + 5]\subseteq B_3$.
Let $(v x_1, v x_2)\to (3, 8)$.

If there exists an $i\in [14, \Delta + 5]\setminus C(u_8)$,
then $u v\overset{\divideontimes}\to i$.
Otherwise, $[14, \Delta + 5]\subseteq C(u_8)$ and it follows from \ref{prop3105} that $\Delta = 11$ and
$|C(u_8)\cap [11, \Delta + 5]| = [14, 16]$.

Hence, $G$ contains a $(3, i)_{(x_1, u_3)}$-path for any $i\in [11, 13]$,
i.e., $G$ contains no $(3, i)_{(x_1, x_2)}$-path for any $i\in [11, 16]$.
Then $(v x_1, x_1 x_2, v x_2)\to (6, 3, 7)$ reduces the proof to Case 4.1.

(4.2.3.)\ $c(x_1 x_2)\in [k, \Delta + 5]$, say $c(x_1 x_2) = k$.
When $4, 5\not\in C(x_1)\cup C(x_2)$, $(v x_1, v x_2, u v)\overset{\divideontimes}\to (4, 5, k)$.
In the other case, assume that $4\in C(x_1)\cup C(x_2)$.

Since $k\le 11$ and $2\times |[k, \Delta + 5]| + |\{1, 6\}| + |\{2, 3, 6\}| + |[7, k - 1]| + |\{4\}| =
2\Delta  + 11 - k \le 2\Delta$,
we have $k = 11$ and $3\not\in C(x_1)$, $5\not\in C(x_1)\cup C(x_2)$.
Then $(v x_1, v x_2, u v)\overset{\divideontimes}\to (3, 5, k)$.

{Case 4.3.} $\{a, b\}\cap S_{5^-}(u)= \emptyset$, assuming w.l.o.g. $a = 3$, $b \in \{1, 4\}$.

When $6\not\in C(x_1)\cup C(x_2)$, $v x_2\to 6$ reduces the proof to to Case 4.2.
In the other case, $[6, k - 1]\subseteq C(x_1)\cup C(x_2)$.
One sees that $k\le 11$ and $2\times |[k, \Delta + 5]| + |\{1, 3\}| + |\{2, b\}| + |[6, k - 1]| = 2\Delta + 10 - k \le 2\Delta$.
When $b = 1$, it follows that $\{4, 5\}\setminus (C(x_1)\cup C(x_2))\ne \emptyset$,
say $4\not\in C(x_1)\cup C(x_2)$,
thus $v x_2\to 4$ reduces the proof to the case where $b\in \{4, 5\}$.
Hence, it suffices to assume that $b = 4$.

When $3\not\in C(x_2)$, since $\{1, 3\}\cup [k, \Delta + 5]\subseteq C(x_1)$,
we have $[6, k - 1]\setminus C(x_1)\ne \emptyset$,
and thus $v x_2 \to 3$, $v x_1\to [6, k - 1]\setminus C(x_1)$ reduces the proof to Case 4.2.
In the other case, $3\in C(x_2)$.
Since $k\le 11$ and $2\times |[k, \Delta + 5]| + |\{1, 3\}| + |\{2, 4, 3\}| + |[6, k - 1]| = 2\Delta + 11 - k \le 2\Delta$,
we have $k = 11$, $4\not\in C(x_1)$, $5\not\in C(x_1)\cup C(x_2)$, and $c(x_1 x_2)\in [k, \Delta + 5]$.
Then $(v x_1, u v)\overset{\divideontimes}\to (5, c(x_1 x_2))$.
\end{proof}

By Lemma~\ref{32to1}, hereafter we assume that $|C(u)\cap C(v)| = 1$,
and w.l.o.g. that $(v x_1, v x_2)_c = (a, k)$,
and further from \ref{prop3001} that $[k + 1, \Delta + 5] = (C\setminus \{1, 2, \ldots, k\})\subseteq B_1 \subseteq C(u_1)\cap C(x_1)$.
We first characterize a useful particularity as follows.

\begin{lemma}
\label{3akcycle}
For ($A_{2}$)--($A_{5}$) other than ($A_{2.2}$),
either an acyclic edge $(\Delta + 5)$-coloring for $G$ can be obtained from $H$ in $O(1)$ time,
or $k\in B_a$ and the $(a, k)_{(u_1, x_1)}$-path in $H$ cannot pass through $x_2$.
\end{lemma}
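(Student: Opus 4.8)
The plan is to argue by contradiction: assuming that no acyclic edge $(\Delta+5)$-colouring of $G$ can be read off from $H$ in $O(1)$ time, I would derive that $k\in B_a$ and that the witnessing $(a,k)$-path stays away from $x_2$. First recall the obstruction already in place. Since $C(u)=[1,k-1]$ and $C(v)=\{a,k\}$ in $H$, the only ``fresh'' colours available for $uv$ are those of $[k+1,\Delta+5]$, and by \ref{prop3001} all of them lie in $B_a$; hence colouring $uv$ by any such colour $j$ would close an $(a,j)$-cycle through the colour-$a$ edges $uu_1$ and $vx_1$. The only remaining option is therefore to \emph{re-use} the colour $k$ on $uv$, which forces us first to recolour the edge $vx_2$ that currently carries $k$.

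The core move I would make is the simultaneous recolouring $vx_2\to\gamma$ together with $uv\to k$. Choosing $\gamma\in[k+1,\Delta+5]\setminus C(x_2)$ keeps $\gamma\notin C(u)$, so after the exchange the only colour lying in $C(u)\cap C(v)$ that could support a bichromatic cycle through the new edge $uv$ (coloured $k$) is $a$; such a cycle would require an $(a,k)$-path joining the colour-$a$ edge $uu_1$ at $u$ to the colour-$a$ edge $vx_1$ at $v$, i.e.\ an $(a,k)_{(u_1,x_1)}$-path. Deleting the $k$-edge $vx_2$ cannot create such a path, so if $k\notin B_a$ no $(a,k)$-cycle appears. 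I would then invoke Lemma~\ref{lemma06} to select $\gamma$ so that the recolouring $vx_2\to\gamma$ itself closes no $(i,\gamma)$-cycle, the only candidate indices being $i\in\{a,k\}\cap C(x_2)$; this yields an acyclic edge $(\Delta+5)$-colouring of $G$ and contradicts the assumption, so $k\in B_a$.

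The remaining structural claim, that the $(a,k)_{(u_1,x_1)}$-path avoids $x_2$, I would settle by a short parity argument. Any $(a,k)_{(u_1,x_1)}$-path has both of its terminal edges coloured $k$, because the colour-$a$ edges $uu_1$ and $vx_1$ at its two endpoints lie off the path. Now $x_2$ is incident with the single $k$-edge $vx_2$; were the path to meet $x_2$, it would have to traverse $vx_2$ and hence pass through $v$, whose only incident colour-$a$ edge is $vx_1$. The path would then be forced to arrive at $x_1$ along $vx_1$, making its terminal edge at $x_1$ coloured $a$ rather than $k$, a contradiction. Hence the path cannot pass through $x_2$.

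The delicate point, and the one I expect to be the main obstacle, is the degenerate sub-case $[k+1,\Delta+5]\subseteq C(x_2)$, where no fresh colour $\gamma$ is available at $x_2$ and the clean double recolouring above breaks down. There I would instead recolour $vx_2$ by a colour of $[1,k-1]\setminus C(x_2)$ and re-run the cycle analysis, now having to control both the $(a,\cdot)$- and the $(k,\cdot)$-paths issuing from $x_2$ simultaneously. The exhaustive exclusion of these bichromatic cycles under a double recolouring is the genuinely technical part; the uniqueness of maximal bichromatic paths supplied by Lemma~\ref{lemma06} is the principal device that keeps this bookkeeping finite and ultimately forces the stated dichotomy.
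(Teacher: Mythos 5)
Your treatment of the easy sub-case (some $\gamma\in[k+1,\Delta+5]\setminus C(x_2)$ exists) and your parity argument for why the $(a,k)_{(u_1,x_1)}$-path cannot pass through $x_2$ are both sound --- the latter is in fact more explicit than the paper, which treats that half of the statement as immediate. The genuine gap is exactly where you flag it: the degenerate case. After the three easy reductions (a colour of $[k+1,\Delta+5]$ missing at $x_2$; $a\not\in C(x_2)$; a colour of $C(u)$ missing from both $C(x_1)$ and $C(x_2)$), one is left with $[k+1,\Delta+5]\cup\{a\}\subseteq C(x_1)\cap C(x_2)$ and $[1,k-1]\subseteq C(x_1)\cup C(x_2)$. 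Your proposed remedy --- recolour $v x_2$ with a colour $\beta\in[1,k-1]\setminus C(x_2)$ --- does not work as described: in this residual case every such $\beta$ necessarily lies in $C(x_1)$, so the recolouring produces $|C(u)\cap C(v)|=2$ and you are back to controlling two families of bichromatic paths with no new leverage; there is no reason the ``exhaustive exclusion'' you defer to should terminate.

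The paper closes this case by a mechanism your proposal never touches: counting. The forced memberships give $2\,|[k+1,\Delta+5]\cup\{a\}|+|\{k\}|+|[2,k-1]| = 2\Delta-k+11\le d(x_1)+d(x_2)\le 2\Delta$, which forces $k=11$ and pins down $C(x_1)$ and $C(x_2)$ almost completely. One then observes that the same argument applies after recolouring $v x_1$ to any colour in $[7,10]$, so that $u_1,u_7,\dots,u_{10}$ are all $6^+$-neighbours of $u$; this is played against the hypothesis of configuration ($A_{5}$) that $u$ is $(11,5^-)$ to force $d(x_2)\le 5$ and complete the recolouring. In other words, the hard case is killed by the structural hypotheses of the configurations ($A_{2}$)--($A_{5}$), not by further local recolouring --- and since your argument never invokes those hypotheses, it cannot be completed along the lines you describe.
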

\begin{proof}
Assume $a = 1$ and $k\not\in B_1$.
One sees that $|[k + 1, \Delta + 5]| = \Delta + 5 - k$.
When $[k + 1, \Delta + 5]\setminus C(x_2) \ne \emptyset$,
let $S = [k + 1, \Delta + 5]\setminus C(x_2)$;
when $1\not\in C(x_2)$, let $S = C\setminus C(x_2)$;
when there exists $C(u)\setminus (C(x_1)\cup C(x_2))$, let $S = \{\alpha\}$.
Then $v x_2\to S$ and $u v\overset{\divideontimes} \to k$.

In the other case, $[k + 1, \Delta + 5]\cup \{1\} \subseteq C(x_1)\cap C(x_2)$ and $[2, k - 1]\subseteq C(x_1)\cup C(x_2)$.
Note that $2\times |[k + 1, \Delta + 5]\cup \{1\}| + |\{k\}| + |[2, k - 1]| = 2\Delta - k + 11\le d(x_1) + d(x_2)\le 2\Delta.$
It follows that $k = 11$, and assume w.l.o.g. $C(x_1) = [1, 6]\cup [12, \Delta + 5]$,
$C(x_2) = \{1\} \cup [7, \Delta + 5]$.
Since $[k + 1, \Delta + 5] = [12, \Delta + 5]\subseteq B_1 \subseteq C(x_1)$, $d(x_1)\ge 6$.
One sees that the same argument applies if let $v x_1\to [7, 10]$ and thus
$d(x_i)\ge 6$ for each $i\in [7, 10]$.
Since from ($A_{5}$) that $u$ is $(11, 5^-)$,
we have $u$ is $(11, 5)$ and $d(x_2)\le 5$.
Then $(v x_1, v x_2)\to (11, 2)$ and $u u_2\overset{\divideontimes}\to [12, \Delta + 5]\setminus C(u_2)$.
\end{proof}

By Lemma~\ref{3akcycle}, we assume below $C\setminus C(u) = [k, \Delta + 5]\subseteq C(u_a)\cap C(x_1)$.
Since $|[k, \Delta  + 5]| = \Delta + 6 - k\ge 6$,
we have $a\in C(u)\setminus S_{5^-}(u)$.
That is, if $a\in S_{5^-}(u)$, then we can obtain an acyclic edge $(\Delta + 5)$-coloring for $G$.

\subsubsection{Configuration ($A_{2.1}$)}
In this case, $C(u) = [1, 7]$, $(v x_1, v x_2)_c = (1, 7)$,
and $C(u_1) = C(x_1) = \{1\}\cup B_1 = \{1\}\cup [7, \Delta + 5]$.

When there exists $b$ such that $C(u_b) = \{b\}\cup [7, \Delta + 5]$, $(u u_1, u u_b, u v)\overset{\divideontimes}\to (b, 1, 8)$.
In the other case, $[7, \Delta + 5]\setminus C(u_i) \ne \emptyset$ for any $i\in [2, 6]$.
When $1\not\in C(x_2)$, $(v x_1, v x_2)\to (2, 1)$ and $u v\overset{\divideontimes} \to [7, \Delta + 5]\setminus C(u_2)$.
One sees that when $G$ contains no $(2, j)_{(x_1, x_2)}$-path for some $j\in \{7\}\cup ([7, \Delta + 5]\setminus C(x_2))$,
it follows from Lemma~\ref{3akcycle} that $C(u_2) = \{2\}\cup B_2 = \{2\}\cup [7, \Delta + 5]$ if $(v x_1, v x_2)\to (2, j)$.

In the other case, $G$ contains an $(i, j)_{(x_1, x_2)}$-path for each $i\in [2, 6]$,
$j\in  \{7\}\cup ([7, \Delta + 5]\setminus C(x_2))$, and $[1, 7]\subseteq C(x_2)$.
It follows that $x_1, x_2\not\in N(u)$, i.e., $u v$ is not contained in a $3$-cycle.

When $\Delta = 7$, let $(x_1 w_7, x_1 w_8)_c = (7, 8)$,
it follows that $C(w_7)\setminus \{7\} = C(w_8)\setminus \{8\} = [1, 6]$,
thus by Lemma~\ref{lemma06}, $(x_1 w_7, x_1 w_8, u v)\overset{\divideontimes} \to (8, 7, 8)$.

In the other case, $\Delta \ge 8$ and $u x_1, u x_2\not\in E(G)$.
Let $G' = G - v + \{u x_1, u x_2\}$, that is, merging the two edges $u v$ and $v x_i$ into a new edge $u x_i$, $i = 1, 2$, while eliminating the vertex $v$.
The graph $G'$ is a planar graph, $\Delta(G') = \Delta$, and contains one less edge than $G$.
By the inductive hypothesis, $G'$ has an acyclic edge $(\Delta + 5)$-coloring $c(\cdot)$ using the color set $C = \{1, 2, \ldots, \Delta + 5\}$ and we assume w.l.o.g. that $c(u u_i) = i$, $i\in [1, 6]$,
$(u x_1, u x_2)_c = (7, 8)$.
To transform back an edge coloring for $G$,
we keep the color for every edge of $E(G') \setminus \{u x_1, u x_2\}$, $(v x_1, v x_2)\to (7, 8)$,
and $u v \overset{\divideontimes}\to C\setminus [1, 8]$.

\subsubsection{Configuration ($A_{2.3}$)}
In this case, $C(u) = [1, 8]$ with $c(u v_1) = 1$.
Let $u_i$ be a $6^+$-vertex, $i\in [2, 4]$ and $u_j$ be a $5^-$-vertex, $j\in [5, 7]$.
When $a = 1$ and assume w.l.o.g. $(v v_1, v v_2)_c = (8, 1)$,
since $[9, \Delta + 5]\subseteq C(v_1)\cap C(v_2)$, there exists an $i\in [2, 7]\setminus (C(v_1)\cup C(v_2))$,
thus $(v v_1, u v)\overset{\divideontimes}\to (i, 8)$\footnote{It is clearly that $G$ contains no $(1, 8)_{(v_1, v_2)}$-path that cannot pass through $v_1$,
a contradiction to Lemma~\ref{3akcycle}. }.
In the other case, $a\in [2, 4]$, say $a = 2$.
Below we distinguish two cases where $c(v v_2) = 2$ and $c(v v_1) = 2$.

{\bf Case 1.} $(v v_1, v v_2)_c = (8, 2)$.

In this case, $[8, \Delta + 5]\subseteq C(u_2)\cap C(v_2)$.
If $1\not\in C(v_2)$, then $v v_2\to 1$ reduces the proof to the case where $a = 1$.
Otherwise, $1\in C(v_2)$ and $C(v_2) = \{1, 2\}\cup [8, \Delta + 5]$.
It follows that $c(v_1 v_2)\in [9, \Delta + 5]$, say $c(v_1 v_2) = 9$.
Since $9\in B_2$, we have $2\in C(v_1)$.

When there exists an $i\in [3, 7]\setminus C(v_1)$, $(v v_2, u v)\overset{\divideontimes} \to (i, 9)$.
In the other case, $[3, 7]\subseteq C(v_1)$ and then $[1, 9]\subseteq C(v_1)$.
It follows that $d(v_1)\ge 9$ and let $T_1 = ([10, \Delta + 5]\setminus C(v_1))\cup \{8\}$.
One sees that $|T_1| = \Delta + 5 - 9 - (d(v_1) - 9) + 1 = \Delta - d(v_1) + 6 \ge 6$.

Let $c(v_2 w_j) = j \in T_1$.
If $G$ contains no $(i, j)_{(v_1, v_2)}$-path for some $i\in [5, 7]$ and $j\in T_1$,
then $(v v_1, v v_2)\to (j, i)$ reduces the proof to the case where $a\in S_{5^-}(u)$.
Otherwise, $G$ contains a $(i, j)_{(v_1, v_2)}$-path for each $i\in [5, 7]$ and $j\in T_1$.
It follows that for any $j\in T_1$, $[5, 7]\cup \{2\}\subseteq C(w_j)$ and $d(w_j)\ge 5$.
Since $v_2$ is $([8, 11], 5^-)$ and $d(v_1)\ge 9$,
there exists $j, k\in T_1$ such that $d(w_j) = d(w_k) = 5$,
i.e., $C(w_j)\setminus \{j\} = C(w_k)\setminus \{k\} = \{2\}\cup [5, 7]$.
Then $(v_2 w_j, v_2 w_k, v v_2, v v_1)\to (k, j, 4, k)$ reduces the proof to the case where $a\in S_{5^-}(u)$.

{\bf Case 2.} $(v v_1, vv_2)_c = (2, 8)$.

In this case, $[8, \Delta + 5]\subseteq C(u_2)\cap C(v_1)$.
It follows that $C(v_1) = \{1, 2\}\cup [8, \Delta + 5]$ with $c(v_1 v_2)\in [9, \Delta + 5]$, say $c(v_1 v_2) = 9$.
Since $9\in B_2$, we have $2\in C(v_2)$.

When there exists an $i\in [3, 7]\setminus C(v_2)$, $(v v_1, u v)\overset{\divideontimes} \to (i, 9)$.
In the other case, $[3, 7]\subseteq C(v_2)$ and then $[2, 9]\subseteq C(v_2)$.
It follows that $d(v_2)\ge 8$ and let $T_2 = ([10, \Delta + 5]\setminus C(v_2))\cup \{8\}$.
One sees that $|T_2| = \Delta + 5 - 9 - (d(v_1) - 8) + 1 = \Delta - d(v_1) + 5 \ge 5$.

Let $c(v_2 w_i) = i \in [2, 7]$.
If $G$ contains no $(i, j)_{(v_1, v_2)}$-path for some $i\in [5, 7]$ and $j\in T_2$,
then $(vv_1, vv_2)\to (i, j)$ reduces the proof to the case where $a\in S_{5^-}(u)$.
Otherwise, $G$ contains a $(i, j)_{(v_1, v_2)}$-path for each $i\in [5, 7]$ and $j\in T_2$.
It follows that for any $i\in [5, 7]$, $T_2\subseteq C(w_i)$ and $d(w_i)\ge 6$.
Since $v_2$ is $([8, 11], 5^-)$ and $d(v_1)\ge 9$,
there exists $i\in \{3, 4\}$ such that $d(w_i)\le 5$, assuming $d(w_3)\le 5$.
One sees that an acyclic edge coloring of $H$ will be produced if let $v v_1\to 3$ and $v v_2\to T_2\setminus C(w_3)$.
Hence, we may assume that $G$ contains a $(3, i)_{(u_3, v_1)}$-path for any $i\in [8, \Delta + 5]$.

For brief description, we introduce a few symbols used in \cite{BLSNHT11}.
A {\em multiset} is a generalized set where a member can appear multiple times.
If an element $x$ appears $t$ times in the multiset $MS$, then we say that the {\em multiplicity} of $x$ in $MS$ is $t$, and write mult$_{MS}(x) = t$.
The {\em cardinality} of a finite multiset $M S$, denoted by $\|M S\|$, is defined as $\|M S\|= \sum_{x\in M S}$mult$_{M S}(x) = t$.
Let $M S_1$ and $M S_2$ be two multisets.
The {\em join} of $M S_1$ and $M S_2$, denoted $M S_1\biguplus M S_2$, is a multiset that has all the members of $M S_1$ and of $M S_2$.
For $x\in M S_1 \biguplus MS_2$, mult$_{M S_1 \biguplus M S_2}(x) = $ mult$_{M S_1}(x) + $ mult$_{M S_2}(x)$.
Clearly, $\|M S_1 \biguplus M S_2\| = \|M S_1\| + \|M S_2\|$.

For $i\in [5, 7]$, one sees from $d(u_i)\le 5$ that $[8, \Delta + 5]\setminus C(u_i)\ne \emptyset$,
and let $T_i = [8, \Delta + 5]\setminus C(u_i)$.

Let
\[
S_u = \biguplus_{i\in [5, 7]}(C(u_i)\setminus \{c(u u_i)\}).
\]

For each $j\in [8, \Delta + 5]$, when $j\not\in S_u$,
there exists an $i\in [5, 7]$ such that $G$ contains neither a $(4, j)_{(u_4, u_i)}$-path nor a $(1, j)_{(v_1, u_i)}$-path,
thus $(u u_i, u v)\overset{\divideontimes}\to (j, i)$.
For each $j\in T_5$, if $G$ contain no $(j, k)_{(u_5, u_k)}$-path for each $k\in \{4, 6, 7\}$,
then $(u u_5, u v_1, u v)\overset{\divideontimes}\to (j, 5, 1)$.

If $1\not\in C(v_2)$ and $1\not\in C(u_i)$ for some $i\in [5, 7]$,
then $(v v_1, v v_2)\to (i, 1)$ and $uv \overset{\divideontimes} \to T_2$.

In the other case, we proceed with the following proposition:
\begin{proposition}
\label{prop3121}
\begin{itemize}
\parskip=0pt
\item[{\rm (1)}] For any $j\in [8, \Delta + 5]$, mult$_{S_u}(j)\ge 1$.
	
\item[{\rm (2)}] For each $j\in T_i$, $i\in [5, 7]$, $G$ contain a $(j, a_i)_{(u_i, u_{a_i})}$-path for some $a_i\in ([4, 7]\setminus \{i\})\cap C(u_i)$.

\item[{\rm (3)}] If $1\not\in C(v_2)$, then $1\in C(u_5)\cup C(u_6)\cup C(u_7)$.
\end{itemize}
\end{proposition}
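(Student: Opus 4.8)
The plan is to prove all three parts of \ref{prop3121} by contraposition, exactly in the spirit of the reductions already carried out in this subcase: I assume that none of the earlier colour exchanges has produced an acyclic edge $(\Delta+5)$-colouring of $G$ (i.e.\ we are genuinely ``stuck''), and for each of (1)--(3) I show that if the asserted property failed, then one more bounded recolouring of the edges at $u$ (and, for (3), at $v$) would extend the colouring of $H$ to $G$ with no bichromatic cycle, a contradiction. Throughout I would invoke the standing facts of this subcase (the colouring $c(uu_i)=i$ of the edges at $u$ with $v_1=u_1$ and $c(uv_1)=1$, the sets $C(v_1)=\{1,2\}\cup[8,\Delta+5]$ and $[2,9]\subseteq C(v_2)$, the labelling of $u_5,u_6,u_7$ as the $5^-$-neighbours of $u$ so that $j\notin C(u_i)\iff j\in T_i$), together with Lemma~\ref{lemma06} to control maximal bichromatic paths, and the definitions of $S_u$, $T_i$, $T_2$.

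For (1), suppose some $j\in[8,\Delta+5]$ has $\mathrm{mult}_{S_u}(j)=0$, i.e.\ $j\notin C(u_5)\cup C(u_6)\cup C(u_7)$. The key observation, which is what makes the counting work, is that $j\notin C(u_i)$ forces $u_i$ to avoid the interior of any maximal $(4,j)$- or $(1,j)$-path: an interior vertex of such a path is incident with a $j$-coloured edge, which $u_i$ does not have. Applying Lemma~\ref{lemma06} to the maximal $(4,j)$-path leaving $u$ through $uu_4$ and to the maximal $(1,j)$-path leaving $u$ through $uv_1=uu_1$ (legitimate since $j\notin C(u)$), each such path can meet at most one of $u_5,u_6,u_7$, and only as its endpoint. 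Hence at most two of the three vertices are ``blocked'', so there is an $i\in[5,7]$ admitting neither a $(4,j)_{(u_4,u_i)}$-path nor a $(1,j)_{(v_1,u_i)}$-path. Recolouring $uu_i\to j$ and setting $uv\to i$ then extends the colouring: the only colour pairs that could close a bichromatic cycle through the recoloured edge $uu_i$ are $(4,j)$ and $(1,j)$ (the other candidates being excluded by the accumulated colour constraints, in particular $[8,\Delta+5]\subseteq C(u_2)$), and both are ruled out by the choice of $i$. This contradiction gives $\mathrm{mult}_{S_u}(j)\ge 1$.

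For (2), fix $i\in[5,7]$ and $j\in T_i=[8,\Delta+5]\setminus C(u_i)$ and consider the rotation $(uu_i,uv_1,uv)\to(j,i,1)$ (shown for $i=5$ in the text). Because $j\notin C(u_i)$, any newly created bichromatic cycle through the edge $uu_i$ must be a $(j,a)$-cycle with $a\in C(u_i)$; tracking the simultaneous recolouring of $uu_1$ and of the new edge $uv$ shows that the only values of $a$ not already excluded are those in $[4,7]\setminus\{i\}$, each forcing a $(j,a)_{(u_i,u_a)}$-path. Thus if $G$ contained no $(j,a_i)_{(u_i,u_{a_i})}$-path for any $a_i\in([4,7]\setminus\{i\})\cap C(u_i)$, the rotation would succeed and we would be done; stuckness therefore yields exactly such a path, proving (2). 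Part (3) is the same device applied at $v$: if $1\notin C(v_2)$ yet $1\notin C(u_i)$ for some $i\in[5,7]$, then $(vv_1,vv_2)\to(i,1)$ is a legal exchange and, since $|T_2|\ge5$, Lemma~\ref{lemma06} leaves a colour of $T_2$ with which $uv$ can be coloured acyclically; this extension contradicts stuckness, so $1\in C(u_5)\cup C(u_6)\cup C(u_7)$ whenever $1\notin C(v_2)$.

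The main obstacle is verifying, in each part, that the exhibited recolouring introduces \emph{no} bichromatic cycle, since several edges at $u$ change colour at once and a new cycle could a priori use any colour of $C(u_i)$ or of $C(v)$. The clean way through is precisely the observation used in (1): the hypothesis $j\notin C(u_i)$ pins each relevant $5^-$-neighbour to the \emph{endpoint} of a single maximal bichromatic path, so Lemma~\ref{lemma06} collapses the unboundedly many potential cycles to at most one forbidden path per colour pair. The delicate bookkeeping is to confirm that, after the simultaneous recolouring of $uu_i$, $uv_1$ and $uv$, the surviving colour pairs are exactly $\{(4,j),(1,j)\}$ in (1) and $\{(j,a):a\in[4,7]\setminus\{i\}\}$ in (2); once that is settled, all three statements follow uniformly.
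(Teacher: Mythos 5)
Your proposal follows the paper's proof essentially verbatim: the same three recolourings ($(uu_i,uv)\to(j,i)$ for (1), $(uu_i,uv_1,uv)\to(j,i,1)$ for (2), and $(vv_1,vv_2)\to(i,1)$ followed by $uv\to T_2$ for (3)), with the same Lemma~\ref{lemma06} endpoint count showing that the maximal $(4,j)$- and $(1,j)$-paths can block at most two of $u_5,u_6,u_7$. The only quibble is the parenthetical in your part (1): the $(2,j)$- and $(3,j)$-cycles are excluded not because $[8,\Delta+5]\subseteq C(u_2)$ (that inclusion would if anything help such a cycle exist), but because the unique maximal $(2,j)$- and $(3,j)$-paths from $u$ are already forced to run to $v_1$ (via $j\in B_2$ and the standing $(3,j)_{(u_3,v_1)}$-paths, with $3\notin C(v_1)$), so their endpoints cannot be any $u_i$ with $i\in[5,7]$.
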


For each $i \in [5, 7]$, let $s_i = |[8, \Delta + 5]\cap C(u_i)|$,
it follows from \ref{prop3121}(2) that $s_i \le 3$
and from \ref{prop3121}(1) that $[8, \Delta + 5]\subseteq S_u$.
Thus, there exists an $i_0\in [8, \Delta + 5]$ such that mult$_{S_u}(i_0) = 1$,
with $i_0\in C(u_7)\setminus (C(u_5)\cup C(u_6))$.
Then it follows from \ref{prop3121}(2) that $G$ contains a $(4, i_0)_{(u_4, u_{j_1})}$-path and a $(1, i_0)_{(u_7, u_{j_2})}$-path, where $\{j_1, j_2\} = \{5, 6\}$.

One sees that if $G$ contains no $(4, i)_{(v_1, v_2)}$-path for some $i\in T_2$,
then we can obtain that $G$ contains a $(4, j)_{(u_4, v_1)}$-path for any $j\in [8, \Delta + 5]$ by letting $(v v_1, v v_2)\to (4, i)$.

Hence, $G$ contains a $(4, i)_{(v_1, v_2)}$-path for each $i\in T_2$ and $T_2\subseteq C(w_4)$.
It is clearly that $d(w_4)\ge 6$.
Since $v_2$ is $([8, 11], 5^-)$ and $d(v_1) = \Delta \ge 8$,
we have $d(x)\le 5$ for any $x\in N(v_2)\setminus \{v_1, w_4, w_5, w_6, w_7\}$.

Now we discuss the following two subcases for two possible values of mult$_{S_u}(1)$, respectively.

{Case 2.1.} mult$_{S_u}(1)\ge 2$.
In this case, assume w.l.o.g. $1\in C(u_5)\cap C(u_6)$.
One sees that $s_5\le 2$, $s_6\le 2$ and $s_7\le 3$.
It follows from $|[8, \Delta + 5]| = \Delta - 2\ge 6$ that there exists an $i\in \{5, 6\}$ such that $s_i = 2$,
assuming w.l.o.g. $s_5 = 2$ and $C(u_5) = \{5, a_5, 1, \alpha_8, \alpha_9\}$, where $\{\alpha_8, \alpha_9, \ldots, \alpha_{\Delta + 5}\} = [8, \Delta + 5]$.

Since $s_5 + s_6\le 2 + 2 = 4 < 6$ and $s_5 + s_7\le 2 + 3 = 5 < 6$,
it follows from \ref{prop3121}(2) that $a_5 = 4$ and $G$ contains a $(4, i)_{(u_4, u_5)}$-path for any $i\in [8, \Delta + 5]\setminus \{\alpha_8, \alpha_9\}$,
$a_6 = 7$ and $G$ contains a $(7, j)_{(u_6, u_7)}$-path for any $j\in [8, \Delta + 5]\setminus (C(u_5)\cup C(u_6))$.
Thus, since $s_6 + s_7\le 2 + 3 = 5 < 6$, it follows from \ref{prop3121}(2) that $4/5\in C(u_6)$.

Hence, we have $\Delta = 8$ and $C(u_6)= \{1, 7, 4/5, \alpha_{10}\}$\footnote{It means that $C(u_6)= \{1, 7, x, \alpha_{10}\}$ where $x\in \{4, 5\}$},
$C(u_7) = \{a_7, \alpha_{11}, \alpha_{12}, \alpha_{13}\}$.
Since $G$ contains a $(4, \alpha_{10})_{(u_4, u_5)}$-path, we have $a_7 = 6$.
One sees clearly that $G$ contains no $(\alpha_8, i)_{(u_i, u_7)}$-path for any $i\in \{4, 5, 6\}$,
a contradiction to \ref{prop3121}(2).

{Case 2.2.} mult$_{S_u}(1)\le 1$.
In this case, it follows from \ref{prop3121}(3) that $1\in C(v_2)$ with $c(v_2w_1) = 1$,
and $\Delta \ge 9$, $|[8, \Delta + 5]| = \Delta - 2 \ge 7$.
Recall that $d(w_1)\le 5$ and $T_2\setminus C(w_1) \ne \emptyset$.

When $1\in C(u_2)\cap C(u_3)$, $(u u_2, u u_3, u v)\overset{\divideontimes}
\to (3, 2, 9)$.
In the other case, $1\not\in C(u_i)$ for some $i\in \{2, 3\}$, assuming w.l.o.g. $1\not\in C(u_3)$.
When mult$_{S_u}(1)= 0$, there exists an $i\in [5, 7]$, assuming w.l.o.g. $i = 5$,
such that $G$ contains no $(1, j)_{(u_i, u)}$-path for any $j\in \{2, 4, 6, 7\}$,
thus $(u u_5, u v_1, v v_1)\to (1, 5, 1)$ and $v v_2\to T_2\setminus C(w_1)$ reduces the proof to the case where $a\in S_{5^-}(u)$.

In the other case, mult$_{S_u}(1) = 1$ and assume w.l.o.g. $1\in C(u_5)\setminus (C(u_6)\cup C(u_7))$.
It follows that $s_5\le 2$.
Since $|[8, \Delta + 5]|\ge 7$, there exists an $i\in \{6, 7\}$ such that $s_i = 3$.
Assume w.l.o.g. $s_6 = 3$ and $C(u_6) = \{6, a_6, \alpha_8, \alpha_9, \alpha_{10}\}$.

Since $s_5 + s_6\le 2 + 3 = 5 < 7$ and $s_6 + s_7\le 3 + 3 = 6 < 7$,
it follows from \ref{prop3121}(2) that $a_6 = 4$ and $G$ contains a $(4, i)_{(u_4, u_6)}$-path for any $i\in [8, \Delta + 5]\setminus \{\alpha_8, \alpha_9, \alpha_{10}\}$,
$a_5 = 7$ and $G$ contains a $(7, j)_{(u_5, u_7)}$-path for any $j\in [8, \Delta + 5]\setminus (C(u_5)\cup C(u_6))$.
Thus, since $s_5 + s_7\le 2 + 3 = 5 < 7$, it follows from \ref{prop3121}(2) that $4/6\in C(u_5)$.

Hence, we have $\Delta = 9$ and $C(u_5)= \{1, 7, 4/6, \alpha_{11}\}$,
$C(u_7) = \{a_7, \alpha_{12}, \alpha_{13}, \alpha_{14}\}$.
Since $G$ contains a $(4, \alpha_{11})_{(u_4, u_6)}$-path, we have $a_7 = 5$.
One sees clearly that $G$ contains no $(\alpha_8, i)_{(u_i, u_7)}$-path for any $i\in \{4, 5, 6\}$,
a contradiction to \ref{prop3121}(2).


\subsubsection{Configuration ($A_{2.4}$), ($A_{3}$)--($A_{5}$)}
In this case, $C(u) = [1, k]$ with $(u x_1, u x_2)_c = (1, 2)$.
When $a\in \{1, 2\}$, say $(v x_1, v x_2)_c = (2, k)$,
since $G$ contains no $(2, k)_{(x_1, x_2)}$-path,
by Lemma~\ref{3akcycle}, we can get an acyclic edge $(\Delta + 5)$-coloring for $G$.

In the other case, $a\in [3, k - 1]\setminus S_{5^-}(u)$,
assuming w.l.o.g. $a = 3$.
It follows that $[k, \Delta + 5]\subseteq C(x_1)\cap C(u_3)$.
We first summarize the special cases in the following Lemma~\ref{lemma17}.
\begin{lemma}
\label{lemma17}
If re-coloring some edges $E'\subseteq E(H)$ incident to $u$ or $v$ does not produce any new bichromatic cycles,
but at least one of the following holds:
\begin{itemize}
\parskip=0pt
\item[{\rm (1)}]
	$C(u)\cap C(v) = \emptyset$,
\item[{\rm (2)}]
	$C(u)\cap C(v) = \{x\}$ with $x\in \{1, 2\}\cup S_{5^-}(u)$.
\end{itemize}
then an acyclic edge $(\Delta + 5)$-coloring for $G$ can be obtained in $O(1)$ time.
\end{lemma}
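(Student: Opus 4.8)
The plan is to extend the given acyclic edge $(\Delta+5)$-coloring $c(\cdot)$ of $H$ to $G$ by coloring the single edge $uv$, exploiting the slack in the palette. Since $d_H(u)=d(u)-1=k-1$ and $d_H(v)=2$, we have $|C(u)|=k-1$ and $|C(v)|=2$; hence in case (2) $|C(u)\cup C(v)|=k$, so the number of free colors is $|C\setminus(C(u)\cup C(v))|=\Delta-k+5\ge 5$, using $k=d(u)\le\Delta$. This count is the engine of the whole argument, and the recoloring hypothesis guarantees that at the moment we apply the lemma $c(\cdot)$ is still a valid acyclic coloring of $H$.

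First I would dispose of case (1). If $C(u)\cap C(v)=\emptyset$, pick any $j\in C\setminus(C(u)\cup C(v))$, a nonempty set. Coloring $uv\to j$ is proper, and it can create no bichromatic cycle: a bichromatic cycle through $uv$ (colored $j$) would force its two neighboring edges, one at $u$ and one at $v$, to carry a common second color $i\in C(u)\cap C(v)$, which is impossible. This is exactly the operation $uv\overset{\divideontimes}\to j$, so $G$ is colored.

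Next I would treat case (2) with $x\in S_{5^-}(u)$. Let $u_s$ be the $5^-$-neighbor of $u$ with $c(uu_s)=x$ and $v_t$ the neighbor of $v$ with $c(vv_t)=x$. By \ref{prop3001}, if we cannot already color $uv$ then $C\setminus(C(u)\cup C(v))\subseteq B_x$. But every $j\in B_x$ labels an edge leaving $u_s$, since the $(x,j)_{(u_s,v_t)}$-path departs $u_s$ along a $j$-edge; thus $B_x\subseteq C(u_s)\setminus\{x\}$ and $|B_x|\le d(u_s)-1\le 4$. This contradicts $|C\setminus(C(u)\cup C(v))|\ge 5$, so a safe free color for $uv$ exists and we are done.

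The hard part is case (2) with $x\in\{1,2\}$, say $x=1=c(ux_1)$. Now the $x$-edge at $u$ runs to $x_1$, which may have degree as large as $\Delta$, so the bound $|B_x|\le 4$ is lost and the naive count fails. Here I would exploit the triangle $ux_1x_2$: since $c(ux_1)=1$ and $c(ux_2)=2$, the unique $1$-edge at $v$ is $vx_2$, so $v_t=x_2$, and every forbidden color $j$ corresponds to a $(1,j)$-alternating path from $x_1$ to $x_2$ whose end-edges are colored $j$; in particular $B_1\subseteq C(x_1)\cap C(x_2)$, with the short cycle through the triangle edge $x_1x_2$ as one concrete obstruction. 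The plan is to combine the uniqueness of maximal bichromatic paths (Lemma~\ref{lemma06}), which pins down for each candidate $j$ the single maximal $(1,j)$-path issuing from $u$ and decides whether it closes through $x_2$, with a secondary recoloring of one triangle edge (recoloring $ux_1$ or $vx_2$ to a free color) that deletes the common color $1$ and returns us to case (1). I expect the delicate point, and the main obstacle, to be verifying that such a secondary recoloring can always be carried out without creating a new bichromatic cycle; this should rely on the palette slack $\Delta-k+5\ge 5$ together with the facts recorded just before this lemma, namely $[k,\Delta+5]\subseteq C(u_a)\cap C(x_1)$.
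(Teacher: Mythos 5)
Your case (1) and the $x\in S_{5^-}(u)$ half of case (2) are correct and match what the paper actually does (the paper gives no separate proof of Lemma~\ref{lemma17}; it is a summary of the analysis carried out just before it, namely \ref{prop3001}, Lemma~\ref{3akcycle}, and the count $|[k,\Delta+5]|=\Delta+6-k\ge 6 > d(u_a)$ for a $5^-$-neighbor $u_a$). Your counting is the same in spirit: with $|C(u)\cup C(v)|=k$ there are at least $\Delta+5-k\ge 5$ free colors, all of which would have to lie in $B_x\subseteq C(u_s)\setminus\{x\}$, a set of size at most $4$ when $d(u_s)\le 5$.

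The gap is the $x\in\{1,2\}$ subcase, and you name it yourself: you propose a ``secondary recoloring of one triangle edge'' and then write that the main obstacle is ``verifying that such a secondary recoloring can always be carried out without creating a new bichromatic cycle.'' That verification is precisely the content of the missing step, so as written this subcase is a plan, not a proof. The paper closes it with Lemma~\ref{3akcycle}: there, one attempts exactly the recoloring you gesture at (recolor $vx_2$ into a set $S$ chosen from $[k+1,\Delta+5]\setminus C(x_2)$, or $C\setminus C(x_2)$, or $C(u)\setminus(C(x_1)\cup C(x_2))$, and then $uv\overset{\divideontimes}\to k$), and the conclusion is that either this succeeds or $k\in B_a$ with the $(a,k)$-path between $u_a$ and $x_1$ forced to avoid $x_2$. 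When $a\in\{1,2\}$ the triangle $ux_1x_2$ together with $(ux_1,ux_2)_c=(1,2)$ and $(vx_1,vx_2)_c=(a,k)$ makes that second alternative impossible (the alleged $(a,k)_{(u_a,x_1)}$-path would have to pass through $x_2$), so the coloring is obtained. Your sketch never establishes this dichotomy, and the uniqueness statement of Lemma~\ref{lemma06} alone does not give it: you still need the explicit candidate sets $S$ and the case analysis on $C(x_1)$, $C(x_2)$ from the proof of Lemma~\ref{3akcycle} (including the degenerate situation $C(x_1)=[1,6]\cup[12,\Delta+5]$, $C(x_2)=\{1\}\cup[7,\Delta+5]$ that forces $k=11$ there). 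Without importing that lemma or reproving its content, the $x\in\{1,2\}$ case remains open.
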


When $2\not\in C(x_1)$, $v v_1\to 2$ and by Lemma~\ref{lemma17}, we are done.
In the other case, $2\in C(x_1)$, and $d(x_1) \ge |[k, \Delta + 5]\cup \{1, 3, 2\}| = \Delta - k + 9\ge \Delta - 2$.
It follows that $k\in [9, 11]$ and ($A_{3}$)--($A_{5}$) holds.

Hereafter, since $u$ is $([9, 11], 5^-)$, let $u_i$ be a $5^-$-vertex, $i\in [6, k - 1]$.
One sees clearly that $|S_{5^-}(u)|\ge k - 1 - 5 = k - 6$,
and $|[k, \Delta + 5]| = \Delta + 6 - k$.
Thus $|S_{5^-}(u)\setminus C(x_1)| \ge |[k, \Delta + 5]\cup \{1, 2, 3\}\cup S_{5^-}(u)| - d(v_1) \ge \Delta + 6 - k + 3 + k - 6 - d(v_1) = \Delta - d(v_1) + 3 \ge 3$.

Let $S_1 = S_{5^-}(u)\setminus C(x_1)$ and assume w.l.o.g. $6, 7, 8\in S_1$.
For any $i\in [3, k - 1]$, $T_i = [k, \Delta + 5]\setminus C(u_i)$ and $T_2 = ([k, \Delta + 5]\setminus C(x_2))\cup \{k\}$.
Let $y_i$ be a neighbor of $x_1$ with $c(x_1 y_i) = i$,
$z_j$ be a neighbor of $x_2$ with $c(x_2 z_j) = j$.
In particular, $c(x_1 y_i) = i$, $i\in \{1, 2\}\cup [k, \Delta + 5]$,
and $c(x_2 z_2) = 2$, where $y_1 = z_2 = u$.

If $3\not\in C(x_2)$, then $(vx_1, vx_2)\to (8, 3)$ and $uv \to [k, \Delta + 5]\setminus C(u_8)$.
One sees that if there exists an $i\in C(u)\setminus C(x_1)\cup C(x_2)$,
then we can let $v x_1\to i$ and the same argument will apply,
thus an acyclic edge coloring of $G$ will be obtained.
For each $i\in S_1$, if $G$ contains no $(i, j)_{(x_1, x_2)}$-path for some $j\in T_2$,
then $(v x_1, v x_2)\to (i, j)$ and by Lemma~\ref{lemma17}, we are done.

In the other case, we proceed with the following proposition:
\begin{proposition}
\label{ijT2}
\begin{itemize}
\parskip=0pt
\item[{\rm (1)}]
	$3\in C(x_2)$ and $C(u)\subseteq C\subseteq C(x_1)\cup C(x_2)$;
\item[{\rm (2)}]
	For each $i\in S_1$, $j\in T_2$, $G$ contains a $(i, j)_{(x_1, x_2)}$-path,
    and $S_1\subseteq C(x_2)$, $T_2\subseteq C(z_i)$, $[6, 8]\subseteq S_1\subseteq C(y_j)$.
\end{itemize}
\end{proposition}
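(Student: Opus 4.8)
The plan is to read \ref{ijT2} as the bookkeeping of the unique residual configuration: the recoloring shortcuts carried out just above the statement each either finish $G$ directly (through an $\overset{\divideontimes}\to$ coloring of $uv$) or reduce to a situation covered by Lemma~\ref{lemma17}, so the proposition simply records the color-set and bichromatic-path data that survive when none of those shortcuts applies. I would therefore prove each clause by exhibiting the recoloring that would finish $G$ in its negation, and then extract (2)'s membership statements from the geometry of the forced alternating paths.

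For part (1), first suppose $3\notin C(x_2)$. Since $8\in S_1$ we have $8\notin C(x_1)$, so $(v x_1,v x_2)\to(8,3)$ is a legal recoloring, and $uv\to[k,\Delta+5]\setminus C(u_8)$ is available because $d(u_8)\le 5<|[k,\Delta+5]|$; after checking (routinely, via the degree-$2$ status of $v$ and Lemma~\ref{lemma06}) that no bichromatic cycle is created, $G$ is colored, a contradiction, so $3\in C(x_2)$. Next, if some $i\in C(u)\setminus(C(x_1)\cup C(x_2))$ existed, recoloring $v x_1\to i$ would fall into an already-finished case; hence $C(u)\subseteq C(x_1)\cup C(x_2)$. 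Combining this with the standing fact $[k,\Delta+5]\subseteq C(x_1)$ from Lemma~\ref{3akcycle}, and noting that $C(u)$ together with $[k,\Delta+5]$ exhausts $C$, gives $C\subseteq C(x_1)\cup C(x_2)$, which is clause (1).

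For part (2), fix $i\in S_1$ and $j\in T_2$. If $H$ had no $(i,j)_{(x_1,x_2)}$-path, then $(v x_1,v x_2)\to(i,j)$ would create no bichromatic cycle through $v$, and the resulting $C(u)\cap C(v)=\{i\}$ with $i\in S_{5^-}(u)$ would let Lemma~\ref{lemma17} finish $G$; so such a path exists for every pair. I would then extract the memberships. Because $i\in S_1$ forces $i\notin C(x_1)$, the path must leave $x_1$ along its $j$-edge $x_1 y_j$ (which exists since $T_2\subseteq[k,\Delta+5]\subseteq C(x_1)$) and continue with color $i$, so $i\in C(y_j)$; letting $i$ range gives $S_1\subseteq C(y_j)$ (and $[6,8]\subseteq S_1$ is already in hand). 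At the far end the path reaches $x_2$ along an $i$- or $j$-edge; the crucial point is that the only $k$-colored edge at $x_2$ is $v x_2$ and the path avoids $v$, while for $j\ne k$ one has $j\notin C(x_2)$, so in either case the closing edge is the $i$-edge $z_i x_2$. This yields $i\in C(x_2)$, hence $S_1\subseteq C(x_2)$, and it exhibits $z_i$ with a preceding $j$-edge, hence $j\in C(z_i)$ and $T_2\subseteq C(z_i)$.

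The step I expect to be the main obstacle is exactly this last ``closing color'' determination: ruling out that the forced $(i,j)_{(x_1,x_2)}$-path enters $x_2$ through a $k$-colored edge. This rests on the easily-overlooked observation that $x_2$ carries a single $k$-edge, namely $v x_2$, and that $v$ --- whose only edges in $H$ are the $3$- and $k$-colored $v x_1,v x_2$ --- cannot lie on an alternating path in the colors $\{i,j\}$ with $i\in S_1$, so that edge is never traversed. The remaining ingredients (legality and acyclicity of each swap, and $|S_1|\ge 3$) are routine and already prepared in the surrounding text.
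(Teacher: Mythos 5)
Your proof is correct and follows the paper's own route: the paper establishes \ref{ijT2} by exactly the three contrapositive recolorings you give ($(vx_1,vx_2)\to(8,3)$ with $uv\to[k,\Delta+5]\setminus C(u_8)$ when $3\notin C(x_2)$, the $vx_1\to i$ reduction for $i\in C(u)\setminus(C(x_1)\cup C(x_2))$, and $(vx_1,vx_2)\to(i,j)$ with Lemma~\ref{lemma17} for the path existence), and then reads the memberships off the forced alternating paths. Your explicit endpoint analysis --- in particular that the unique $k$-edge at $x_2$ is $vx_2$ and $v$ cannot lie on an $(i,k)$-path for $i\in S_1$ --- is exactly the (unstated) justification the paper relies on for $S_1\subseteq C(x_2)$, $T_2\subseteq C(z_i)$ and $S_1\subseteq C(y_j)$.
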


One sees that $|T_2|\ge (2\times (|[k, \Delta + 5]|)) + |\{1, 3, 2\}| + |\{2, 3\} + |[4, k - 1]| - (d(x_1) + d(x_2)) + |\{k\}| = 2\Delta + 14 - k - (d(x_1) + d(x_2))
	= (\Delta - d(x_1)) + (\Delta - d(x_2)) + 14 - k \ge \Delta - d(x_2) + 14 - k\ge 14 - k \ge 3$,
and if $1\in C(x_2)$, then $|T_2|\ge (2\times (|[k, \Delta + 5]|)) + |\{1, 3, 2\}| + |\{2, 3, 1\} + |[4, k - 1]| - (d(x_1) + d(x_2)) + |\{k\}| \ge \Delta - d(x_2) + 15 - k$.
Thus, $T_2\setminus\{k\} \ne \emptyset$.

When $d(y_i) = 2$ for some $i\in [k, \Delta + 5]$, i.e., $C(y_i) = \{i, 3\}$,
$x_1y_i\to C\setminus C(x_1)$, $uv \to j$, and if $i = k$, $v x_2\overset{\divideontimes}\to T_2\setminus \{k\}$.
When $d(y_j)\le 5$ for some $j\in T_2$, i.e., $C(y_j) = \{i, 3, 6, 7, 8\}$,
$x_1 y_j\to C\setminus C(x_1)$, $u v \to j$, and if $j = k$, $v x_2\overset{\divideontimes}\to T_2\setminus \{k\}$.
For each $i\in S_1$, if $G$ contains no $(i, j)_{(x_1, x_2)}$-path for some $j\in (C\setminus C(x_2))\setminus T_2$,
then $(v x_1, v x_2)\to (i, j)$ and $uv\overset{\divideontimes}\to T_2$.

In the other case, we proceed with the following proposition:
\begin{proposition}
\label{ijC6+x1x2}
\begin{itemize}
\parskip=0pt
\item[{\rm (1)}]
	For each $i\in S_1$, $j\in C\setminus C(x_2)$, $G$ contains a $(i, j)_{(x_1, x_2)}$-path,
    and $(C\setminus C(x_2))\subseteq C(z_i)$, $[6, 8]\subseteq S_1\subseteq C(y_j)$, $d(y_j)\ge |S_1| + 1$;
\item[{\rm (2)}]
	For any $j\in [k, \Delta + 5]$, $d(y_j)\ge 3$, and if $j\in T_2$, $d(y_j)\ge 6$;
\item[{\rm (3)}]
    $n_{6^+}(x_1)\ge |T_2| + |\{u, x_2\}|= |T_2| + 2\ge \Delta - d(x_2) + 14 - k + 2 = \Delta - d(x_2) + 16 - k$;
\item[{\rm (4)}]
    If $1\in C(x_1)$, then $n_{6^+}(x_1)\ge |T_2| + |\{u, x_2\}|= |T_2| + 2\ge \Delta - d(x_2) + 17 - k$;
\item[{\rm (5)}]
	For each $i\in S_1$, $d(z_i)\ge |C\setminus C(x_2)| + |\{i\}| = \Delta + 5 - d(x_2) + 1 = \Delta - d(x_2) + 6\ge 6$;
\item[{\rm (6)}]
	$n_{6^+}(x_2)\ge |S_1| + |\{u, x_1\}|\ge |S_1| + 2\ge 5$.
\end{itemize}
\end{proposition}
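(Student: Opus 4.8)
The plan is to derive every clause of the proposition from the existence of bichromatic $(i,j)_{(x_1,x_2)}$-paths forced by the current case, and then to turn those paths into colour-set and degree information by inspecting their two endpoints. \textbf{First} I would pin down the path-existence statement underlying (1). Proposition~\ref{ijT2}(2) already gives, for every $i\in S_1$, a $(i,j)_{(x_1,x_2)}$-path for each $j\in T_2$, hence for each $j\in T_2\setminus\{k\}=[k+1,\Delta+5]\setminus C(x_2)$. The case immediately preceding the proposition is precisely the one in which, for every $i\in S_1$, the escape ``$(vx_1,vx_2)\to(i,j)$ for some $j\in(C\setminus C(x_2))\setminus T_2$'' is \emph{not} available, so $G$ also contains a $(i,j)_{(x_1,x_2)}$-path for every such $j$. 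Since $C\setminus C(x_2)=\big(T_2\setminus\{k\}\big)\cup\big((C\setminus C(x_2))\setminus T_2\big)$, this produces a path for all $i\in S_1$ and all $j\in C\setminus C(x_2)$; note that each needed endpoint vertex exists because $S_1\subseteq C(x_2)$ (so $z_i$ exists) and $C\setminus C(x_2)\subseteq C(x_1)$ by \ref{prop3001}/\ref{ijT2}(1) (so $y_j$ exists).

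\textbf{Next} I would read off (1), (5) and (2). Since $i\in S_1=S_{5^-}(u)\setminus C(x_1)$ forces $i\notin C(x_1)$ and $j\in C\setminus C(x_2)$ forces $j\notin C(x_2)$, each $(i,j)_{(x_1,x_2)}$-path must leave $x_1$ along $x_1y_j$ (coloured $j$) and enter $x_2$ along $z_ix_2$ (coloured $i$); the path-edge following $x_1y_j$ is coloured $i$ and the one preceding $z_ix_2$ is coloured $j$, so $i\in C(y_j)$ and $j\in C(z_i)$. Aggregating over the two ranges yields $(C\setminus C(x_2))\subseteq C(z_i)$ and $S_1\subseteq C(y_j)$, and $[6,8]\subseteq S_1$ comes from \ref{ijT2}(2); as $x_1y_j$ is coloured $j\notin S_1$ we get $d(y_j)\ge|S_1|+1$, which is (1). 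The mirror count at $z_i$ --- the colours $C\setminus C(x_2)$ on $C(z_i)$ plus the edge $z_ix_2$ coloured $i\in C(x_2)$, hence $i\notin C\setminus C(x_2)$ --- gives $d(z_i)\ge|C\setminus C(x_2)|+1=\Delta-d(x_2)+6\ge6$, which is (5). Finally (2) is merely the statement that the two recolouring reductions already discarded do not apply: no $j\in[k,\Delta+5]$ has $d(y_j)=2$ and no $j\in T_2$ has $d(y_j)\le5$, so $d(y_j)\ge3$ on $[k,\Delta+5]$ and $d(y_j)\ge6$ on $T_2$.

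\textbf{The counting parts} (3), (4), (6) are then bookkeeping over distinct $6^+$-neighbours. For (3) I would present $\{y_j\mid j\in T_2\}\cup\{u,x_2\}$ as neighbours of $x_1$: each $y_j$ ($j\in T_2$) is $6^+$ by (2), $u=y_1$ is $6^+$ since $d(u)=k\ge8$, and $x_2$ is $6^+$ because $C(x_2)\supseteq S_1\cup\{2,3,k\}$ (from \ref{ijT2}) has size $\ge|S_1|+3\ge6$. Distinctness is the delicate point: the $y_j$ carry distinct colours $j\in[k,\Delta+5]$ at $x_1$; $c(x_1u)=1\notin[k,\Delta+5]$ separates $u$; and $c(x_1x_2)\notin T_2$, since every colour of $T_2\setminus\{k\}$ is absent from $C(x_2)$ while $k=c(vx_2)\in C(x_2)$, so properness at $x_2$ excludes $c(x_1x_2)\in T_2$. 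Hence $n_{6^+}(x_1)\ge|T_2|+2$, and substituting the already-established $|T_2|\ge\Delta-d(x_2)+14-k$ gives (3); part (4) is the same estimate in the sub-case, recorded just before the proposition, where the colour $1$ adds one further unit to the covering bound on $|T_2|$, raising it to $\Delta-d(x_2)+15-k$ and $n_{6^+}(x_1)$ to $\Delta-d(x_2)+17-k$. Part (6) is the symmetric count at $x_2$ over $\{z_i\mid i\in S_1\}\cup\{u,x_1\}$, each $6^+$ by (5) together with $d(x_1)\ge|[k,\Delta+5]|+3\ge9$, with distinctness from $c(x_2u)=2\notin S_1$ and $c(x_1x_2)\in C(x_1)$ so $c(x_1x_2)\notin S_1$ (as $S_1\cap C(x_1)=\emptyset$). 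I expect the main obstacle to be exactly this distinctness bookkeeping --- guaranteeing that the shared edge $x_1x_2$ coincides with none of the $x_1y_j$ or $x_2z_i$ --- rather than any single deduction, together with keeping the $|T_2|$ covering inequality honest about colours lying in both $C(x_1)$ and $C(x_2)$.
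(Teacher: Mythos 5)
Your proposal is correct and follows essentially the same route as the paper, which states this proposition without a separate proof as the summary of the reductions that have just failed (path existence from \ref{ijT2}(2) plus the discarded recolourings, then reading colours off the path endpoints $y_j$, $z_i$ and counting distinct $6^+$-neighbours). Your reading of part (4) as the sub-case $1\in C(x_2)$ (the condition recorded just before the proposition, which raises the covering bound on $|T_2|$ by one) is the intended one, and your distinctness bookkeeping for $u$, $x_2$ and the $y_j$, $z_i$ matches what the paper leaves implicit.
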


When there exists an $j\in (S_{5^-}(u)\cap C(x_1))\setminus C(x_2)$,
it follows that $G$ contains a $(i, j)_{(x_1, x_2)}$-path for each $i\in S_1$, and $S_1\subseteq C(y_j)$,
if $d(y_j) \le |S_1| + 2$ with $3\in C(y_j)$, i.e., $C(y_j) = \{3, j\}\cup S_1$,
then $x_1 y_j\to C\setminus C(x_1)$, $v x_1 \to j$, and by Lemma~\ref{lemma17}, we are done;
if $3\not\in C(y_j)$ and $G$ contains no $(3, i)_{(y_i, x_1)}$-path for any $i\in C(x_1)\setminus [k, \Delta + 5]$,
$x_1 y_j\to 3$, $v x_1 \to j$, and by Lemma~\ref{lemma17}, we are done.
In the other case, we proceed with the following proposition:
\begin{proposition}
\label{jx1notx2}
For each $j\in (S_{5^-}(u)\cap C(x_1))\setminus C(x_2)$,
\begin{itemize}
\parskip=0pt
\item[{\rm (1)}]
	$d(y_j)\ge |S_1| + 3\ge 6$, or
\item[{\rm (2)}]
	$d(y_j) = |S_1| + 2$, $3\not\in C(y_j)$, and $G$ contains a $(3, i)_{(y_j, x_1)}$-path for some $i\in C(x_1)\setminus [k, \Delta + 5]$.
\end{itemize}
\end{proposition}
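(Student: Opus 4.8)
The plan is to read Proposition~\ref{jx1notx2} exactly as the bookkeeping residue of two completion attempts: I will show that whenever neither alternative (1) nor (2) holds, one of the two attempts recovers an acyclic edge $(\Delta+5)$-coloring of $G$, contradicting the standing assumption of the ``other case.'' Thus the proof is a negation argument rather than a direct construction, and the real content is verifying that the two recolorings are legitimate.

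First I would fix $j\in(S_{5^-}(u)\cap C(x_1))\setminus C(x_2)$ and note $j\in C\setminus C(x_2)$, so Proposition~\ref{ijC6+x1x2}(1) applies verbatim: $G$ contains an $(i,j)_{(x_1,x_2)}$-path for every $i\in S_1$, $S_1\subseteq C(y_j)$, and $d(y_j)\ge|S_1|+1$. The key bookkeeping is that the colors $\{j\}\cup S_1$ are distinct and disjoint from $\{3\}$: indeed $c(x_1y_j)=j\in S_{5^-}(u)$ while $3=a\notin S_{5^-}(u)$, and $S_1\cap C(x_1)=\emptyset$. This is what converts a numerical bound on $d(y_j)$ into an exact description of $C(y_j)$.

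Next, assuming for contradiction that (1) fails, I have $|S_1|+1\le d(y_j)\le|S_1|+2$, and I split on whether $3\in C(y_j)$. If $3\in C(y_j)$, then $\{3,j\}\cup S_1\subseteq C(y_j)$ are $|S_1|+2$ distinct colors, forcing $C(y_j)=\{3,j\}\cup S_1$; I recolor $x_1y_j$ with a color $\gamma\in C\setminus C(x_1)$ (freeing $j$ at $x_1$) and set $vx_1\to j$, which makes $C(u)\cap C(v)=\{j\}$ with $j\in S_{5^-}(u)$, so Lemma~\ref{lemma17}(2) completes $G$. If $3\notin C(y_j)$, then the failure of (2) leaves two possibilities: either $d(y_j)=|S_1|+1$, whence $C(y_j)=\{j\}\cup S_1$ and the only candidate $(3,i)_{(y_j,x_1)}$-path would need $i=j$ and use the excluded edge $x_1y_j$, so it is spurious; or $d(y_j)=|S_1|+2$ but no $(3,i)_{(y_j,x_1)}$-path exists with $i\in C(x_1)\setminus[k,\Delta+5]$. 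In both situations I recolor $x_1y_j\to 3$ and $vx_1\to j$, again reaching $C(u)\cap C(v)=\{j\}$ and finishing via Lemma~\ref{lemma17}(2). Either branch contradicts non-completability, so (1) or (2) must hold.

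The hard part will be the acyclicity check concealed inside the two invocations of Lemma~\ref{lemma17}: I must guarantee that recoloring $x_1y_j$ (to $\gamma$, respectively to $3$) together with $vx_1\to j$ introduces no bichromatic cycle. The threats are a $(\gamma,c)$- or $(3,c)$-cycle closing through the recolored edge $x_1y_j$, and a $(j,\cdot)$-cycle closing at $v$ through $vx_1$. The former is killed exactly by the branch hypotheses---the explicit form $C(y_j)=\{3,j\}\cup S_1$ when $3\in C(y_j)$, and the stipulated absence of the relevant $(3,i)_{(y_j,x_1)}$-paths when $3\notin C(y_j)$---while the latter is controlled by Lemma~\ref{lemma06} applied to the maximal $(i,j)_{(x_1,x_2)}$-paths supplied by Proposition~\ref{ijC6+x1x2}. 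A secondary technical point is confirming that a legal $\gamma\in C\setminus(C(x_1)\cup S_1)$ exists; since $C(x_1)$ and $S_1$ are disjoint this is precisely the statement $C(x_1)\cup S_1\neq C$, which I would derive by counting against the established containment $\{1,2,3\}\cup[k,\Delta+5]\subseteq C(x_1)$ together with $S_1\subseteq S_{5^-}(u)\subseteq[6,k-1]$, settling the borderline equality case by the forced structure of $C(x_1)$.
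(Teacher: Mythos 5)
Your proposal is correct and follows essentially the same route as the paper: the paper's own proof is exactly the two recolorings you describe — $x_1y_j\to C\setminus C(x_1)$, $vx_1\to j$ when $3\in C(y_j)$ forces $C(y_j)=\{3,j\}\cup S_1$, and $x_1y_j\to 3$, $vx_1\to j$ when $3\notin C(y_j)$ and the relevant $(3,i)_{(y_j,x_1)}$-paths are absent — each closed off by Lemma~\ref{lemma17}, with the proposition recorded as the residue when neither recoloring applies. Your treatment is in fact slightly more careful than the paper's (handling the degenerate $d(y_j)=|S_1|+1$ case and the properness of $\gamma$ at $y_j$); the only point to watch is that excluding obstruction colors in $[k,\Delta+5]$ relies on $[k,\Delta+5]\subseteq B_3$ (so those $(3,c)$-components are already occupied by the $u$--$v$ path through $x_1$), which is the Lemma~\ref{lemma06}-type argument you allude to.
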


For $i\in (C(u)\setminus (S_{5^-}(u)\cup C(x_1)))\cap C(x_2)$,
if $G$ contains no $(i, j)_{(x_1, x_2)}$-path for some $j\in T_2$, then $(v x_1, v x_2)\to (i, j)$,
and thus an acyclic edge coloring of $G$ can be obtained if $G$ contains no $(i, k)_{(u_i, x_1)}$-path for some $k\in [k, \Delta + 5]$.
Hence, we proceed with the following proposition:
\begin{proposition}
\label{9ij}
For each $i\in (C(u)\setminus (S_{5^-}(u)\cup C(x_1)))\cap C(x_2)$,
\begin{itemize}
\parskip=0pt
\item[{\rm (1)}]
	$G$ contains a $(i, j)_{(x_1, x_2)}$-path for each $j\in T_2$, $T_2\subseteq C(z_i)$,
    and $d(z_i)\ge |T_2| + 1$, or
\item[{\rm (2)}]
	if $G$ contains no $(i, j)_{(x_1, x_2)}$-path for some $j\in T_2$ or $d(z_i)\le |T_2|$,
    then $G$ contains a $(i, j)_{(u_i, v_1)}$-path for each $j\in [k, \Delta + 5]$.
\end{itemize}
\end{proposition}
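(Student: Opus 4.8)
The plan is to push the standard recolor-or-extract-a-path dichotomy through on the single color $i$, using that $i\notin C(x_1)$ but $i\in C(x_2)$. Fix such an $i$. Since $\{1,2,3\}\cup[k,\Delta+5]\subseteq C(x_1)$, the condition $i\notin C(x_1)$ forces $i\in[4,k-1]$, and $i\notin S_{5^-}(u)$ makes $u_i$ a $6^+$-vertex; as $i\in C(x_2)$ there is a neighbor $z_i$ of $x_2$ with $c(x_2z_i)=i$, and $z_i\ne v$ because $c(vx_2)=k\ne i$. The only move I use is $(vx_1,vx_2)\to(i,j)$ with $j\in T_2$. It is legal: $i\notin C(x_1)$, and $j\in T_2$ gives $j\notin C(x_2)\setminus\{k\}$; since both recolored edges meet $v$ and afterwards $C(v)=\{i,j\}$, every newly created bichromatic cycle must pass through $v$, hence is an $(i,j)$-cycle, whose complement is an $(i,j)_{(x_1,x_2)}$-path left intact by the recoloring. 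So the move succeeds iff $G$ has no such path.

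If every $j\in T_2$ carries an $(i,j)_{(x_1,x_2)}$-path, I read off conclusion (1). For $j\in T_2\setminus\{k\}$ we have $j\notin C(x_2)$, so the path can enter $x_2$ only through its unique $i$-colored edge $x_2z_i$, forcing $j\in C(z_i)$; and an $(i,k)_{(x_1,x_2)}$-path cannot run through $v$ (its two $v$-edges are colored $3$ and $k$, and $3\notin\{i,k\}$), so it likewise enters through $x_2z_i$ and yields $k\in C(z_i)$. Thus $T_2\subseteq C(z_i)$, and since $i\notin[k,\Delta+5]\supseteq T_2$ while $i\in C(z_i)$, we get $d(z_i)\ge|T_2|+1$, which is exactly (1).

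Otherwise there is a bad $j_0\in T_2$ with no $(i,j_0)_{(x_1,x_2)}$-path, and I take $j_0\ne k$ using the already-established $T_2\setminus\{k\}\ne\emptyset$; note that $d(z_i)\le|T_2|$ also produces such a $j_0$, since then $|C(z_i)\cap T_2|\le d(z_i)-1<|T_2|$ and the characterization above shows a missing value of $T_2\setminus\{k\}$ has no path. Recolor $(vx_1,vx_2)\to(i,j_0)$ to obtain an acyclic $(\Delta+5)$-coloring of $H$ with $C(u)\cap C(v)=\{i\}$. To extend it over $uv$ the candidates are $c\in[k,\Delta+5]\setminus\{j_0\}$, and $uv\to c$ fails only through an $(i,c)$-cycle, i.e. an $(i,c)_{(u_i,x_1)}$-path, because such a cycle must close at $v$ along the $i$-colored edge $vx_1$. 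If some $c$ has no such path, $uv\overset{\divideontimes}\to c$ finishes $G$; otherwise an $(i,c)_{(u_i,x_1)}$-path exists for every $c\in[k,\Delta+5]\setminus\{j_0\}$, and since these paths avoid both $v$ and $j_0$ they persist in the original coloring. A symmetric run with a second bad value, or a maximal-path argument via Lemma~\ref{lemma06}, recovers the case $c=j_0$, giving the $(i,j)_{(u_i,v_1)}$-path for all $j\in[k,\Delta+5]$ claimed in (2) (recall $x_1=v_1$).

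The delicate point throughout is the color $j=k$: unlike the colors above $k$, it already lies in $C(x_2)$ via $c(vx_2)=k$ and $v$ is adjacent to $x_2$, so the tidy ``enter $x_2$ through $x_2z_i$'' step must first exclude that the governing $(i,k)$-path passes through $v$, and must also contend with the possibility that such a path reaches $x_2$ along some other $k$-edge (for instance $x_1x_2$) rather than through $z_i$; pinning this down may require the color of $x_1x_2$ or Lemma~\ref{lemma03}. I expect this bookkeeping, together with the choice of the recoloring value $j_0$ away from $k$ and the final recovery of the omitted color $c=j_0$, to be the only genuinely fussy parts; everything else is the routine legality check together with the Lemma~\ref{lemma17}-style reductions already used repeatedly in this subsection.
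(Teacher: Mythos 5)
Your proposal follows the paper's route exactly: the only move is $(vx_1,vx_2)\to(i,j)$ for a $j\in T_2$ carrying no $(i,j)_{(x_1,x_2)}$-path, followed by an attempt to color $uv$ from $[k,\Delta+5]$, the obstruction being an $(i,c)_{(u_i,x_1)}$-path. Your treatment of alternative (1) — the legality of the move, the reduction of any new bichromatic cycle to an $(i,j)_{(x_1,x_2)}$-path that avoids $v$ and survives the recoloring, the forced entry into $x_2$ through the unique $i$-colored edge $x_2z_i$ (with the color $j=k$ handled separately), and the count $d(z_i)\ge|T_2|+1$ from $i\in C(z_i)\setminus T_2$ — is correct, and in fact more careful than the two lines the paper devotes to it.

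Alternative (2) has two genuine soft spots. First, "I take $j_0\ne k$" is not justified by $T_2\setminus\{k\}\ne\emptyset$: it can happen that every $j\in T_2\setminus\{k\}$ carries an $(i,j)_{(x_1,x_2)}$-path while $k$ alone does not (equivalently, $T_2\setminus\{k\}\subseteq C(z_i)$ but $k\notin C(z_i)$), and the hypothesis $d(z_i)\le|T_2|$ only guarantees one value of $T_2$ missing from $C(z_i)$, which may be $k$ itself. Second, whichever bad $j_0$ you place on $vx_2$ (or $k$, if you leave $vx_2$ alone), that color is excluded from the legal candidates for $uv$, so the dichotomy only delivers $(i,c)_{(u_i,x_1)}$-paths for $c\in[k,\Delta+5]\setminus\{j_0\}$; your two proposed rescues do not close this, since a second bad value need not exist and Lemma~\ref{lemma06} is a non-existence statement that cannot manufacture the missing path. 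To be fair, the paper's own proof of \ref{9ij} is exactly as terse here and, read literally, also only yields the paths for the legal candidates, so you have reconstructed the argument that is actually on the page and correctly isolated the one genuinely delicate color — but the full strength of conclusion (2), namely all of $[k,\Delta+5]$, is not established by your write-up (nor, strictly, by the paper's).
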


For each $i \in S_{5^-}(u)$, let $s_i = |[k, \Delta + 5]\cap C(u_i)|$.
Let $S_2$ be the set of $\{i \in S_{5^-}(u)\cap C(x_1)\}$ such that $G$ contains no $(3, i)_{(u_3, x_1)}$-path,
and
\[
S_u = \biguplus_{i\in S_1\cup S_2}(C(u_i)\setminus \{c(u u_i)\}).
\]

When for some $i\in S_1\cup S_2$, $G$ contain no $(j, k)_{(u_i, u_k)}$-path for each $j\in T_i$, $k\in C(u)\setminus \{3, i\}$,
then $(uu_i, uv)\overset{\divideontimes} \to (j, i)$.
In the other case, we proceed with the following proposition:
\begin{proposition}
\label{S1S2}
Let $i\in S_1\cup S_2$.
For each $j\in T_i$,
$G$ contain a $(j, a_i)_{(u_i, u_{a_i})}$-path for some $a_i\in (C(u)\setminus \{3, i\})\cap C(u_i)$ and $s_i\le 3$.
\end{proposition}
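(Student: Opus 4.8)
The plan is to recognize Proposition~\ref{S1S2} as the precise negation of the success condition for the recoloring move $(u u_i,u v)\overset{\divideontimes}\to(j,i)$ stated immediately before it, and to prove it by analyzing that move for an arbitrary fixed $i\in S_1\cup S_2$ and $j\in T_i$. Throughout, recall that $C(u)=[1,k-1]$ (the colors of $u u_1,\dots,u u_{k-1}$ in $H$), that $c(u u_i)=i$, and that $C(v)=\{3,k\}$ in $H$ with $a=3\notin S_{5^-}(u)$; since $i\in S_1\cup S_2\subseteq S_{5^-}(u)$, the vertex $u_i$ is a $5^-$-vertex.

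First I would check that $(u u_i,u v)\to(j,i)$ is a proper edge coloring of $G$: recoloring $u u_i\to j$ is legal because $j\in T_i$ gives $j\notin C(u_i)$ and $j\ge k$ gives $j\notin C(u)\setminus\{i\}\subseteq[1,k-1]$, and $u v\to i$ is legal because $i$ is freed at $u$ and $i\notin C(v)=\{3,k\}$ (here $i\ne 3$ since $i\in S_{5^-}(u)$ while $3\notin S_{5^-}(u)$, and $i\le k-1<k$). As the coloring of $H$ is acyclic and only $u u_i$ and $u v$ are altered, any newly created bichromatic cycle must traverse one of these two edges. Classifying such a cycle by its color pair, a $(j,\ell)$-cycle closing through the second $u$-edge $u u_\ell$ exists exactly when $H$ contains a $(j,\ell)_{(u_i,u_\ell)}$-path, for $\ell\in C(u)\setminus\{3,i\}$; the only remaining candidates are a $(j,i)$-cycle closing through $u v$ and a $(j,3)$-cycle closing through $u u_3$, which I would show cannot occur. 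Since we are in the ``other case'', the move never yields an acyclic coloring, so for every such $i$ and $j$ at least one obstructing $(j,a_i)_{(u_i,u_{a_i})}$-path with $a_i\in C(u)\setminus\{3,i\}$ must survive. Because this path leaves $u_i$ along its $a_i$-colored edge, $a_i\in C(u_i)$, giving $a_i\in(C(u)\setminus\{3,i\})\cap C(u_i)$, as claimed.

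For the bound $s_i\le 3$ I would argue from the degree of $u_i$. First $T_i\ne\emptyset$, since $s_i=|[k,\Delta+5]\cap C(u_i)|\le|C(u_i)|\le 5<\Delta+6-k=|[k,\Delta+5]|$ (using $\Delta\ge d(u)=k$), so a witness color $a_i$ as above exists. In a proper coloring $|C(u_i)|=d(u_i)\le 5$, and $C(u_i)$ contains the $s_i$ colors of $[k,\Delta+5]\cap C(u_i)$, the color $i$ on $u u_i$, and the color $a_i$; these are pairwise distinct because $a_i\ne i$ and both $i\le k-1$ and $a_i\le k-1$ lie below the high range. Hence $s_i+2\le|C(u_i)|\le 5$, i.e.\ $s_i\le 3$.

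The main obstacle is the bichromatic-cycle analysis, specifically justifying that $\ell=i$ and $\ell=3$ need not be listed among the obstructions. Ruling out the $(j,i)$-cycle through $u v$ and the $(j,3)$-cycle through $u u_3$ is where the local structure must be exploited: $C(v)=\{3,k,i\}$ leaves $v$ with no $j$-edge once $j>k$, the equality $[k,\Delta+5]\subseteq C(u_3)$ pins down the behavior at $u_3$, and Lemma~\ref{lemma06} forbids two maximal bichromatic paths of the same color pair from a common vertex; the borderline value $j=k$, where $v x_2$ supplies a $k$-edge at $v$, is the most delicate and would be handled using the path facts already recorded in \ref{ijT2}--\ref{9ij}. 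Everything else is bookkeeping that I would cite rather than repeat.
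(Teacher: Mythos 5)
There is a genuine gap in your cycle classification. You assert that every newly created bichromatic cycle, once the cycles through $u u_\ell$ with $\ell\in C(u)\setminus\{3,i\}$ are accounted for, must be either a $(j,i)$-cycle closing through $uv$ or a $(j,3)$-cycle closing through $u u_3$ --- i.e.\ you only consider cycles whose color pair contains $j$. But the edge $uv$ is newly colored $i$, and a bichromatic cycle through $uv$ pairs $i$ with a color present at \emph{both} $u$ and $v$; since $C(v)=\{3,k\}$ and $3\in C(u)$, the cycle $u,v,x_1,\dots,u_3,u$ with color pair $(i,3)$ (using $v x_1$ colored $3$ and $u u_3$ colored $3$) is a candidate that involves neither the edge $u u_i$ nor the color $j$, and it is absent from your enumeration. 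This cycle exists precisely when $i\in C(x_1)$ and $H$ contains a $(3,i)_{(u_3,x_1)}$-path, and it is exactly the obstruction that the hypothesis $i\in S_1\cup S_2$ is designed to exclude: for $i\in S_1=S_{5^-}(u)\setminus C(x_1)$ the vertex $x_1$ has no $i$-colored edge, and for $i\in S_2$ the defining condition is that no $(3,i)_{(u_3,x_1)}$-path exists. Your proof uses $i\in S_1\cup S_2$ only to get $d(u_i)\le 5$, so it never engages with this obstruction; without it the argument would ``prove'' the statement for every $i\in S_{5^-}(u)$, which is false in general and is why the paper restricts to $S_1\cup S_2$.

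A secondary weakness is that your treatment of the $(j,3)$-cycle through $u u_i$ and $u u_3$ is only a pointer to tools rather than an argument. The actual reason it fails is that $j\in B_3$ forces the maximal $(3,j)$-path out of $u$ through $u_3$ to run through $x_1$ and then along $x_1 v$ into $v$, while $j\in T_i$ means $u_i$ carries no $j$-colored edge and so can only sit on that path as an endpoint; for $j>k$ the path terminates at $v$, and the case $j=k$ (where the path continues through $v x_2$) does need the extra structural facts you allude to. Your properness check and the counting argument for $s_i\le 3$ (namely $\{i,a_i\}\subseteq C(u_i)\setminus[k,\Delta+5]$ with $T_i\ne\emptyset$ guaranteeing some $a_i$, hence $s_i\le d(u_i)-2\le 3$) are correct and match the paper's intent.
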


{\bf Case 1.} $u$ is $(9, 5^-)$ and at least one of ($A_{3.1}$)--($A_{3.8}$) holds.

In this case, $C(u) = [1, 8]$.
It follows from Propositions~\ref{ijT2} and \ref{ijC6+x1x2} that
\begin{proposition}
\label{91}
\begin{itemize}
\parskip=0pt
\item[{\rm (1)}]
	$C(x_1) = [1, 3]\cup [9, \Delta + 5]$ and $[2, 9]\subseteq C(x_2)$ with $c(x_1 x_2)\in [10, \Delta + 5]$.
\item[{\rm (2)}]
	$d(x_1) = \Delta$ and $n_{6^+}(x_1)\ge |T_2| + 2 \ge \Delta - d(x_2) + 16 - k = \Delta - d(x_2) + 16 - 9 = \Delta - d(x_2) + 7\ge 7$;
\item[{\rm (3)}]
	If $1\in C(x_2)$, then $n_{6^+}(x_1)\ge \Delta - d(x_2) + 17 - k = \Delta - d(x_2) + 17 - 9 = \Delta - d(x_2) + 8\ge 8$;
\item[{\rm (4)}]
	For any $j\ne 2$, $d(y_j)\ge 3$;
\item[{\rm (5)}]
    $n_{6^+}(x_2)\ge |S_1| + |\{u, x_1\}|\ge |S_1| + 2 = |S_{5^-}(u)| + 2 = n_{5^-}(u) - 1 + 2 = n_{5^-}(u) + 1\ge 4 + 1 = 5$.
\end{itemize}
\end{proposition}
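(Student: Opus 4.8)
The plan is to read off Proposition~\ref{91} as the $k=9$ instance of the two general propositions~\ref{ijT2} and~\ref{ijC6+x1x2}, sharpening statements (1) and (5) with information peculiar to $C(u)=[1,8]$ and to the triangle $u x_1 x_2$.

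First I would pin down $C(x_1)$ exactly. We already have $[9,\Delta+5]\subseteq C(x_1)$ (this is $[k,\Delta+5]\subseteq C(x_1)\cap C(u_3)$ with $a=3$), together with $1=c(ux_1)$, $3=c(vx_1)$, and $2\in C(x_1)$ (otherwise $vx_1\to 2$ finishes through Lemma~\ref{lemma17}). Since $\{1,2,3\}\cup[9,\Delta+5]$ already has cardinality $\Delta$ while $d(x_1)\le\Delta$, this forces $d(x_1)=\Delta$ and $C(x_1)=[1,3]\cup[9,\Delta+5]$, giving the first half of (2). For $[2,9]\subseteq C(x_2)$ I would use the containment $C(u)\subseteq C(x_1)\cup C(x_2)$ from~\ref{ijT2}(1): the block $[4,8]=C(u)\setminus C(x_1)$ is forced into $C(x_2)$, while $2=c(ux_2)$, $9=c(vx_2)$, and $3\in C(x_2)$ again by~\ref{ijT2}(1). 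Finally $c(x_1x_2)\in[10,\Delta+5]$ because this color must lie in $C(x_1)\cap C(x_2)$ yet differ from the colors $1,3$ already present at $x_1$ and from $2,9$ already present at $x_2$, leaving only $[10,\Delta+5]$.

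Parts (2)--(4) are then plain substitutions $k=9$: the chain $n_{6^+}(x_1)\ge|T_2|+2\ge\Delta-d(x_2)+16-k$ of~\ref{ijC6+x1x2}(3) becomes $\Delta-d(x_2)+7\ge 7$; the refinement $\ge\Delta-d(x_2)+17-k=\Delta-d(x_2)+8\ge 8$ under $1\in C(x_2)$ is~\ref{ijC6+x1x2}(4); and $d(y_j)\ge3$ for $j\ne2$ follows from~\ref{ijC6+x1x2}(2) for $j\in[9,\Delta+5]$, supplemented by $y_1=u$ (so $d(y_1)=9$) and $y_3=v$ (so $d(y_3)=3$).

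The step I expect to be the real work is (5), because it requires converting the bound~\ref{ijC6+x1x2}(6), stated in terms of $|S_1|$, into one in terms of $n_{5^-}(u)$. The crux is the identity $|S_1|=|S_{5^-}(u)|$, i.e.\ $S_{5^-}(u)\cap C(x_1)=\emptyset$: since $S_{5^-}(u)\subseteq[1,8]$ and $C(x_1)=[1,3]\cup[9,\Delta+5]$, the only possible common colors are $1,2,3$, and $1,3$ are excluded because $x_1=u_1$ (degree $\Delta$) and $u_3$ (namely $a=3\in C(u)\setminus S_{5^-}(u)$) are $6^+$-vertices. To exclude $2$ I would note that $x_2$, like $u$, is a neighbor of the $3$-vertex $v$, so Lemma~\ref{lemma02}(1) gives $d(x_2)\ge 8$; in particular $x_2$ is a $6^+$-vertex and $2=c(ux_2)\notin S_{5^-}(u)$. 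Hence $S_1=S_{5^-}(u)$ and $|S_{5^-}(u)|=n_{5^-}(u)-1$ (the $5^-$-neighbors of $u$ other than $v$), so~\ref{ijC6+x1x2}(6) yields $n_{6^+}(x_2)\ge|S_1|+2=n_{5^-}(u)+1$; finally $u$ being $(9,5^-)$ gives $n_{5^-}(u)=9-n_{6^+}(u)\ge4$, and the claimed $\ge5$ follows.
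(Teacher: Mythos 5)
Your proposal is correct and follows essentially the same route as the paper, which simply reads Proposition~\ref{91} off as the $k=9$ specialization of Propositions~\ref{ijT2} and~\ref{ijC6+x1x2}; your extra steps (forcing $C(x_1)=[1,3]\cup[9,\Delta+5]$ from the cardinality count, eliminating $1,2,3$ from $S_{5^-}(u)$ via $d(x_1)=\Delta$, $a=3\notin S_{5^-}(u)$ and Lemma~\ref{lemma02}(1) for $x_2$, and checking $y_1=u$, $y_3=v$ for part (4)) are exactly the details the paper leaves implicit, and they all check out.
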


When $d(y_2) = 2$, let $y'_2$ be the neighbor of $y_2$ other than $x_1$,
if $c(y_2 y'_2) = 3$, then $(x_1 y_2, v x_1)\to (6, 2)$,
if $c(y_2 y'_2)\not\in \{1, 3\}$, then $(x_1 y_2, v x_1)\to (3, 2)$,
otherwise, $y_2 y'_2\to C\setminus (C(y'_2)\cup \{3\})$, and $(x_1 y_2, v x_1)\to (3, 2)$,
thus, by Lemma~\ref{lemma17}, we are done.

In the other case, we proceed with the following proposition:
\begin{proposition}
\label{92}
For any $y_j\in N(x_1)$, $d(y_j)\ge 3$, i.e., $n_2(x_1) = 0$, and $d(x_1) - n_2(x_1) = \Delta$.
\end{proposition}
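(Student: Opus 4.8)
The plan is to reduce the proposition to facts already in hand about the neighborhood of $x_1$, leaving only the vertex $y_2$ to be examined afresh. I would start by recalling from \ref{91}(1)--(2) that $C(x_1) = [1,3]\cup[9,\Delta+5]$ and $d(x_1)=\Delta=|C(x_1)|$. Since $x_1\notin\{u,v\}$, the deleted edge $uv$ is not incident to $x_1$, so $d_H(x_1)=d_G(x_1)=\Delta$ and every edge at $x_1$ is already colored by $c(\cdot)$. Hence the neighbors of $x_1$ are exactly the vertices $y_j$ with $j\in C(x_1)=\{1,2,3\}\cup[9,\Delta+5]$, and it suffices to prove $d(y_j)\ge 3$ for each such $j$.

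For every color $j\ne 2$ this is already available: $y_1=u$ has degree $9$, $y_3=v$ has degree $3$, and $d(y_j)\ge 3$ for $j\in[9,\Delta+5]$ by \ref{91}(4). So the entire content of the statement is the single claim $d(y_2)\ge 3$, which I would establish by contradiction. Assuming $d(y_2)=2$ and writing $y_2'$ for the other neighbor of $y_2$, the idea is to free color $2$ at $x_1$ by recoloring the edge $x_1y_2$, and then set $vx_1\to 2$; this turns $C(u)\cap C(v)$ into $\{2\}$ with $2\in\{1,2\}$, so Lemma~\ref{lemma17} finishes in $O(1)$ time. Recoloring $x_1y_2$ is harmless except possibly along the pendant edge $y_2y_2'$, which is exactly why the argument branches on $c(y_2y_2')$: one recolors $x_1y_2\to 6$ when $c(y_2y_2')=3$, $x_1y_2\to 3$ when $c(y_2y_2')\notin\{1,3\}$, and, after first recoloring $y_2y_2'$ to a color outside $C(y_2')\cup\{3\}$, again $x_1y_2\to 3$ in the remaining case $c(y_2y_2')=1$. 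In each branch the only cycle that could become bichromatic would have to pass through the $2$-vertex $y_2$, and the color choices rule this out, so no new bichromatic cycle appears; thus $d(y_2)=2$ would already give an acyclic edge $(\Delta+5)$-coloring of $G$, and in the surviving case $d(y_2)\ge 3$.

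Assembling the two paragraphs, every neighbor $y_j$ of $x_1$ has $d(y_j)\ge 3$, so $n_2(x_1)=0$; together with $d(x_1)=\Delta$ this yields $d(x_1)-n_2(x_1)=\Delta$. The only genuinely delicate step is the $d(y_2)=2$ analysis, and within it the subcase $c(y_2y_2')=1$, where one must first recolor the pendant edge $y_2y_2'$ and check that this creates no bichromatic cycle and leaves $C(u)\cap C(v)$ in the form required by Lemma~\ref{lemma17}; the rest is bookkeeping that \ref{91} and the earlier path propositions have already prepared.
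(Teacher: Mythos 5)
Your proposal is correct and follows essentially the same route as the paper: the cases $j\ne 2$ are exactly what \ref{91}(4) already gives, and your treatment of $d(y_2)=2$ — branching on $c(y_2y_2')$ with the recolorings $x_1y_2\to 6$, $x_1y_2\to 3$, or first $y_2y_2'\to C\setminus(C(y_2')\cup\{3\})$, followed by $vx_1\to 2$ and Lemma~\ref{lemma17} — is precisely the paper's argument. Nothing further is needed.
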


It follows from Propositions~\ref{91} and \ref{92} that $v_1 = x_1$, $v_2 = x_2$ and ($A_{3.4}$),
($A_{3.5}$) where $v_2$ is $(14, 6/7)$, ($A_{3.7}$), or ($A_{3.8}$) holds.
Since $v_1$ is $(\Delta, 7)$, $\Delta\in [10, 14]$, by Propositions~\ref{91}(3),
we have $1\not\in C(v_2)$, $|T_2| = 5$, and for each $i\in S_1 = S_{5^-}(u)$, $G$ contains a $(1, i)_{(v_1, v_2)}$-path, $1\in C(u_i)$.

Now assume w.l.o.g. $c(x_1 x_2) = 14$, $T_2 = [9, 13]$ and $C(x_2) = [2, 9]\cup [14, \Delta + 5]$.
It follows from $1\in C(u_i)$ that $s_i \le 3$ and
\[
S_u = \biguplus_{i\in S_{5^-}(u)}(C(u_i)\setminus \{c(u u_i)\}).
\]

{Case 1.1.} $u$ is $(9, 4^-)$ with $[5, 8]\subseteq S_{5^-}(u)$ and $\Delta \in [11, 14]$.
When there exists a $j\in [9, \Delta + 5]\setminus S_u$,
one sees clearly that there exists an $i\in [5, 8]$ such that $G$ contains no $(j, k)_{(u_j, u)}$-path for any $k\in C(u)$, $(u u_i, u v)\overset{\divideontimes}\to (j, i)$.
In the other case, for $j\in [9, \Delta + 5]$, mult$_{S_u}(j)\ge 1$, i.e., $[9, \Delta + 5]\subseteq S_u$.
Note that $|[9, \Delta + 5]| = \Delta - 3$,
and if $\Delta \ge 13$, then $|[9, \Delta + 5]| \ge 10 > 3\times 2 + 1\times 3 = 9$.

(1.1.1.)\ $\Delta \ge 13$.
Then there exists $i, j\in [5, 8]$ such that $s_i = s_j = 3$.
Assuming w.l.o.g. $C(u_{i_1}) = \{1, i_1, \alpha_9, \alpha_{10}, \alpha_{11}\}$,
$C(u_{i_2}) = \{1, i_2, \alpha_{12}, \alpha_{13}, \alpha_{14}\}$,
where $\alpha_i\in [9, \Delta + 5]$, $i\in [9, 14]$.

Let $\alpha\in [9, \Delta + 5]\setminus (C(u_{i_1})\cup C(u_{i_2}))$.
By Lemma~\ref{lemma06}, there exists an $i\in \{i_1, i_2\}$ such that $G$ contains no $(1, \alpha)_{(u_{i}, v_1)}$-path, say $G$ contains no $(1, \alpha)_{(u_{i_1}, v_1)}$-path.
Then $(u u_{i_1}, uv)\overset{\divideontimes}\to (\alpha, i_1)$.

(1.1.2.)\ ($A_{3.4}$) holds and $v_2$ is $(\Delta, 6)$ with $\Delta \in [11, 12]$.
Then $d(z_4)\le 5$ and it follows from \ref{9ij}(2) that $G$ contains a $(4, i)_{(u_4, v_1)}$-path for each $i\in [9, \Delta + 5]$.

Since $|[14, \Delta + 5]| + 2\times |[9, 13]| = 
\Delta + 2\ge 11 + 2 = 13 > 12$,
there exists a $j\in [9, 13]$ such that mult$_{S_u}(j) = 1$, assuming w.l.o.g. mult$_{S_u}(10) = 1$\footnote{If mult$_{S_u}(9) = 1$, $vv_2\to 10$} and $10\in C(u_5)$.
By Lemma~\ref{lemma06}, there exists an $i\in [6, 7, 8]$ such that $G$ contains no $(10, j)_{(u_{i}, u)}$-path for any $j\in C(u)$.
Then $(u u_i, u v)\overset{\divideontimes}\to (10, i)$.

{Case 1.2.} $u$ is $(9, 5)$, and $v_2$ is $(\Delta, 5)$, $\Delta \in [10, 13]$,
or $v_2$ is $(\Delta, 6)$, $\Delta \in [12, 13]$.
Since $n_{6^+}(v_2)\le 6$ and $d(z_i)\ge 6$, $i\in [6, 8]$,
there exists an $i\in \{4, 5\}$ such that $d(z_i)\le 5$,
say $d(z_4)\le 5$ and it follows from \ref{9ij}(2) that $G$ contains a $(4, j)_{(u_4, v_1)}$-path for each $j\in [9, \Delta + 5]$.

When there exists a $j\in [9, 14]\setminus S_u$,
one sees clearly that there exists an $i\in [6, 8]$ such that $G$ contains no $(j, k)_{(u_j, u)}$-path for any $k\in C(u)$, $(u u_i, u v)\overset{\divideontimes}\to (j, i)$.

In the other case, we proceed with the following proposition:
\begin{proposition}
\label{95678j}
For $j\in [9, 14]$, mult$_{S_u}(j)\ge 1$, i.e., $[9, 14]\subseteq S_u$.
\end{proposition}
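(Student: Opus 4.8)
The plan is to prove Proposition~\ref{95678j} by contradiction, exactly as the preceding sentence anticipates. Suppose some colour $j\in[9,14]$ has $\mathrm{mult}_{S_u}(j)=0$. Since $j\ge 9>8$ and $c(uu_i)=i$ for $i\in\{6,7,8\}$, this says precisely that $j\notin C(u_6)\cup C(u_7)\cup C(u_8)$. I would then show that for a suitable index $i\in\{6,7,8\}$ the recolouring $(uu_i,uv)\overset{\divideontimes}{\to}(j,i)$ produces an acyclic edge $(\Delta+5)$-colouring of $G$, contradicting that we are in the branch where no extension of the colouring of $H$ exists. Properness is immediate: $j\notin C(u)=[1,8]$ and $j\notin C(u_i)$, so $uu_i\to j$ is admissible; this frees colour $i$ at $u$, and $i\notin C(v)=\{3,9\}$, so $uv\to i$ is admissible. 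Everything therefore reduces to ruling out bichromatic cycles.

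First I would dispose of the easy half, namely that $uv\to i$ never creates a cycle. Any such cycle is an $(i,k)$-cycle through $uv$, so its second colour $k$ must occur at $v$; as $C(v)=\{3,9\}$, we have $k\in\{3,9\}$. An $(i,3)$-cycle would have to leave $v$ along $vx_1$ and then alternate through an $i$-edge at $x_1$, which is impossible because $i\in\{6,7,8\}$ while $C(x_1)=[1,3]\cup[9,\Delta+5]$. An $(i,9)$-cycle would have to close at $u$ along a $9$-edge; but after the recolouring $u$'s palette is $[1,8]\cup\{j\}$, which contains $9$ only if $j=9$, and in that single case the only $9$-edge at $u$ is $uu_i$, whose endpoint $u_i$ has just lost its $i$-edge, so the cycle still cannot close. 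Hence colouring $uv$ is harmless and the whole burden falls on the recolouring $uu_i\to j$.

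Second, I would control the $(j,k)$-cycles created by $uu_i\to j$. Because $j\notin C(u)$, for each colour $k\in C(u)$ Lemma~\ref{lemma06} furnishes a unique maximal $(k,j)$-path emanating from $u$ along its $k$-edge, hence a unique far endpoint $z_k$; moreover each $u_i$ carries no $j$-edge, so $u_i$ can only be such a far endpoint, never an interior vertex of one of these paths. A $(j,k)$-cycle is therefore closed by $uu_i\to j$ exactly when $u_i=z_k$ for some $k\in C(u_i)\cap[1,8]$, and since the $z_k$ are attached to distinct colours, distinct indices $i$ can be obstructed only by distinct colours. Using this, together with $1\in C(u_i)$ and the bound $|C(u_i)\cap[1,8]|=d(u_i)-s_i$ coming from $s_i\le 3$ and $d(u_i)\le 5$, I would argue that not all three of $u_6,u_7,u_8$ can be simultaneously obstructed, so a good index $i$ survives and the recolouring $(uu_i,uv)\overset{\divideontimes}{\to}(j,i)$ goes through.

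The step I expect to be the crux is precisely this last one: guaranteeing a clean index $i$. The palette checks for $uv$ are routine, and the reduction of the danger to the far endpoints $z_k$ via Lemma~\ref{lemma06} is standard, but verifying that $u_6,u_7,u_8$ cannot all coincide with problematic endpoints is where the accumulated structural facts of Case~1.2 --- that $1\in C(u_i)$, that $G$ contains a $(1,i)_{(v_1,v_2)}$-path, and that $s_i\le 3$ --- must be combined carefully. Once a surviving index is produced, the contradiction is immediate and $[9,14]\subseteq S_u$ follows.
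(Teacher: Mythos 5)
Your skeleton coincides with the paper's: the proposition is exactly the ``otherwise'' branch of the observation that if some $j\in[9,14]$ misses all of $C(u_6),C(u_7),C(u_8)$ then $(uu_i,uv)\overset{\divideontimes}\to(j,i)$ works for a suitable $i\in[6,8]$. Your verification that $uv\to i$ is harmless (via $C(v)=\{3,9\}$ and $i\notin C(x_1)$) and your reduction of the danger of $uu_i\to j$ to the far endpoints $z_k$ of the maximal $(k,j)$-paths from $u$ are both sound. The gap is that you never prove the one statement that carries the whole proposition, namely that some $i\in[6,8]$ escapes obstruction; you only announce that you ``would argue'' it. Moreover, the ingredients you name would not by themselves close it: $1\in C(u_i)$, $s_i\le 3$ and $d(u_i)\le 5$ only bound $|C(u_i)\cap[1,8]\setminus\{i\}|$ by $4$, so a priori $u_6,u_7,u_8$ could be the endpoints $z_{k_6},z_{k_7},z_{k_8}$ for three distinct colours, and the $(1,i)_{(v_1,v_2)}$-paths concern the colour pair $(1,i)$, not $(k,j)$, so they say nothing about where the $z_k$ sit.

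What actually whittles the blocking colours down is different information accumulated in this case. Since $j\notin S_u$, the maximal $(k,j)$-path for $k\in\{6,7,8\}$ is the single edge $uu_k$, so $z_k=u_k$ and these colours block no other $u_i$; the already-established $(4,j)_{(u_4,v_1)}$-paths force $z_4=v_1$ (because $4\notin C(v_1)=[1,3]\cup[9,\Delta+5]$); $j\in B_3$ together with $C(v)=\{3,9\}$ forces the maximal $(3,j)$-path to terminate at $v$ when $j\ne 9$; and $C(v_2)=[2,9]\cup[14,\Delta+5]$ forces $z_2=v_2$ when $j\in[10,13]$. Only after this reduction is the set of potentially blocking colours small enough for a pigeonhole among the three candidates $u_6,u_7,u_8$, and the boundary values $j\in\{9,14\}$ still need separate care. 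To be fair, the paper itself dismisses this step with ``one sees clearly,'' so you are not being held to a higher standard than the authors met; but as written your proposal does not contain a proof of the proposition, and the counting route you sketch for the missing step is not the one that succeeds.
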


(1.2.1.)\ $[15, \Delta + 5]\subseteq S_u$.
Note that $|[9, \Delta + 5]| = \Delta - 3\ge 10 - 3 = 7$.
Here we do not distinguish the colors of $\{9, \ldots, \Delta + 5\}$, and refer them to as $\alpha_9, \ldots, \alpha_{(\Delta + 5)}$, respectively.%
\footnote{That is, $(\alpha_9, \ldots, \alpha_{(\Delta + 5)})$ is an arbitrary permutation of $(9, \ldots, \Delta + 5)$.}
For each $i\in [6, 8]$, one sees from $1\in C(u_i)$ that $s_i\le 3$.
It follows from \ref{95678j} that there exists an $i\in [6, 8]$ such that $s_3 = 3$,
assuming w.l.o.g. $s_6 = 3$.
Then by \ref{S1S2}, $G$ contains a $(1, j)_{(u_6, v_1)}$-path for each $j\in T_6$,
and since $1\not\in C(v_2)$, we have $14\in C(u_6)$.

Since $s_6 + s_7\le 6 < 7$ and $s_6 + s_8\le 6 < 7$,
it follows from \ref{S1S2} that $s_7 = s_8 = 2$,
$\Delta = 10$ and $v_2$ is $(\Delta, 5)$.
Thus, $d(z_5)\le 5$ and it follows from \ref{9ij} that $G$ contains a $(5, i)_{(u_5, v_1)}$-path for each $i\in [9, \Delta + 5]$.
Then $C(u_6)= \{6, 1, 14, \alpha_9, \alpha_{10}\}$, $C(u_7) = \{7, 1, a_7, \alpha_{11}, \alpha_{12}\}$,
$C(u_8) = \{8, 1, a_8, \alpha_{13}, \alpha_{15}\}$, where $a_7, a_8\in \{2, 6, 7, 8\}$.

Since for each $i\in [7, 8]$,
$[9, 13]\setminus (C(u_6)\cup C(u_i)))\ne \emptyset$ and $[9, 15]\setminus (C(u_6)\cup C(u_i)))\ne \emptyset$,
it follows from \ref{S1S2} that $a_7 = 8$, $a_8 = 7$.
However, there exists an $i\in \{7, 8\}$ such that
$G$ contains no $(\alpha_9, j)_{(u_i, u)}$-path for each $j\in C(u)$,
a contradiction to \ref{S1S2}.

(1.2.2.)\ $[15, \Delta + 5]\setminus S_u \ne \emptyset$, assuming w.l.o.g. $15\not\in S_u$.
It follows from \ref{S1S2} that for each $i\in [6, 8]$,
$G$ contains a $(15, j)_{(u_i, u)}$-path for some $j\in \{1, 2, 5\}$.
By Lemma~\ref{lemma06}, there exists an $i\in [6, 8]$ such that $5\in C(u_i)$, and $G$ contains a $(5, 15)_{(u_5, u_i)}$-path.
It follows from \ref{9ij} that $G$ contains a $(5, j)_{(v_1, v_2)}$-path for each $j\in T_2$ and $d(z_5)\ge |T_2| + 1\ge 6$.
One sees clearly from Lemma~\ref{lemma06} that $G$ contains no $(i, j)_{(u, v_2)}$-path for each $i\in \{1\}\cup [3, 8]$ and $j\in T_2$.

Let $uv_2\to 10$.
Since $v_1$ is $(\Delta, 7)$, i.e., $d(y_2)\le 5$,
by the similar discussion to show \ref{ijC6+x1x2}(2),
an acyclic edge $(\Delta + 5)$-coloring for $G$ can be obtained.

{\bf Case 2.} $u$ is $(10, 5^-)$ and at least one of ($A_{4.1}$)--($A_{4.4}$) holds.

In this case, $C(u) = [1, 9]$.
Let $u_i$ be a $6^+$-vertex, $i\in [3, 5]$ and $u_j$ be a $5^-$-vertex, $j\in [6, 9]$.
It follows from \ref{ijC6+x1x2} that $n_{6^+}(x_1)\ge 6$ and $n_{6^+}(x_2)\ge 5$.

Hereafter, we assume ($A_{4.2}$), ($A_{4.3}$), or ($A_{4.4}$) holds.
One sees that $|T_2| = |[10, \Delta + 5]\setminus C(x_2)| + |\{10\}| = |[10, \Delta + 5]\setminus C(x_2)| + 1$.
Since $[10, \Delta + 5]\cup [1, 3]\subseteq C(x_1)$, $d(x_1)\ge 3 + \Delta + 5 - 9 = \Delta - 1$.

When $d(x_1) = \Delta - 1$, it follows that $C(x_1) = [1, 3]\cup [10, \Delta + 5]$,
$c(x_1 x_2)\in [11, \Delta + 5]$, assuming w.l.o.g. $c(x_1x_2) = 11$.
Since $[2, 11]\subseteq C(x_2)$,
$n_{6^+}(x_1)\ge |T_2| + 2\ge (\Delta + 5 -11) - (d(v_2) - 10) + 1 + 2 = \Delta -d(v_2) + 7$.
Since $S_1 = S_{5^-}(u)$,
it follows from \ref{ijC6+x1x2}(6) that $n_{6^+}(x_2)\ge |S_1| + 2 = |S_{5^-}(u)| + 2 \ge 4 + 2 = 6$.
Thus, none of ($A_{4.2}$), ($A_{4.3}$), and ($A_{4.4}$) holds.

When $d(x_1) = \Delta$,
we discuss the following two subcases where $S_{5^-}(u)\cap C(x_1)= 0, 1$, respectively.

{Case 2.1.} $S_{5^-}(u)\cap C(x_1)= 0$, assuming w.l.o.g. $4\in C(x_1)$.

In this case, $C(x_1)= [1, 4]\cup [10, \Delta + 5]$,
and $\{2, 3, c(x_1 x_2)\}\cup [5, 10]\subseteq C(x_2)$.
One sees clearly that $S_1 = [6, 9]\subseteq C(x_2)$.
It follows from \ref{ijC6+x1x2} (6) that $n_{6^+}(x_2)\ge |S_1| + 2 = 4 + 2 = 6$.

(2.1.1.) $c(x_1 x_2) = 4$.
Since $[2, 10]\subseteq C(x_2)$,
$|T_2|\ge \Delta + 5 -10 - (d(x_2) - 9) + 1 = \Delta - d(x_2) + 5$,
and thus $n_{6^+}(x_1)\ge |T_2| + 2\ge \Delta - d(x_2) + 7$.

(2.1.2.) $c(x_1x_2) \in [11, \Delta + 5]$, assuming w.l.o.g. $c(x_1x_2) = 11$.
Since $([2, 3]\cup [5, 11])\subseteq C(x_2)$,
$|T_2|\ge \Delta + 5 -11 - (d(x_2) - 9) + 1 = \Delta - d(x_2) + 4$,
and if $1/4\in C(x_2)$, $|T_2|\ge \Delta + 5 -11 - (d(x_2) - 10) + 1 = \Delta - d(x_2) + 5$.
Thus, when $1/4\in C(x_2)$, $n_{6^+}(x_1)\ge |T_2| + 2\ge \Delta - d(x_2) + 7$.

In the other case, $1, 4\not\in C(x_2)$.
It follows from \ref{ijC6+x1x2}(1) that $S_1\subseteq C(y_4)$ and $d(y_4)\ge |S_1| + 1 = 5$.
If $d(y_4) = 5$, i.e., $C(y_4)= \{4\}\cup [6, 9]$,
then $(x_1y_4, vx_1, vx_2, uv)\overset{\divideontimes}\to (5, 6, 4, 10)$.
Otherwise, $d(y_4)\ge 6$.
Then $n_{6^+}(x_1)\ge |T_2| + |\{u, x_2, y_4\}| = |T_2| + 3 \ge \Delta - d(x_2) + 4 + 3 = \Delta - d(x_2) + 7$.

By the two subcases discussed above,
one sees that $n_{6^+}(x_2)\ge 6$ and $n_{6^+}(x_1)\ge \Delta - d(x_2) + 7\ge 7$.
It follows that ($A_{4.2}$) holds.
However, $v_1 = x_1$, $v_2 = x_2$, and $n_{6^+}(v_1)\ge \Delta - d(v_2) + 7$, a contradiction.

{Case 2.2.} $S_{5^-}(u)\cap C(x_1) = 1$, assuming w.l.o.g. $9\in C(x_1)$.

In this case, $C(x_1) = [1, 3]\cup [9, \Delta + 5]$,
and $([2, 8]\cup \{10, c(x_1 x_2)\})\subseteq C(x_2)$.
One sees clearly that $S_1 = S_{5^-}(u)\setminus C(x_1) = [6, 8]$.
It follows from \ref{ijC6+x1x2} (6) that $n_{6^+}(x_2)\ge |S_1| + 2 = |S_{5^-}(u)| + 2 \ge 3 + 2 = 5$,
and from \ref{ijC6+x1x2}(2) that $d(y_j) \ge 3$, $j\in [10, \Delta + 5]$.

When $d(y_2) = 2$, let $y'_2$ be the neighbor of $y_2$ other than $x_1$,
$y_2y'_2\to ([10, \Delta + 5]\cup [4, 8])\setminus C(y'_2)$, $(x_1y_2, vx_1)\to (3, 2)$,
thus, by Lemma~\ref{lemma17}, we are done.
In the other case, $d(y_2)\ge 3$.

(2.2.1.) $c(x_1x_2) = 9$. Thus, $n_2(x_1) = 0$ and $d(x_1) - n_2(x_1) = \Delta$.

Since $[2, 10]\subseteq C(x_2)$,
$|T_2|\ge \Delta + 5 - 10 - (d(x_2) - 9) + 1 = \Delta - d(x_2) + 5$,
and thus $n_{6^+}(x_1)\ge |T_2| + 2\ge \Delta - d(x_2) + 7\ge 7$.
Further, if $1\in C(x_2)$,
then $|T_2|\ge \Delta + 5 - 10 - (d(x_2) - 10) + 1 = \Delta - d(x_2) + 6$,
and thus $n_{6^+}(x_1)\ge |T_2| + 2\ge \Delta - d(x_2) + 8\ge 8$.

It follows that ($A_{4.4}$) holds, $v_1 = x_1$, $v_2 = x_2$ and $C(v_2) = [2, 10]\cup [15, \Delta + 5]$, i.e., $T_2 = [10, 14]$.

Since $v_2$ is $(\Delta, 5)$, for any $i\in [4, 5]$, $d(z_i)\le 5$,
and it follows from \ref{9ij}(2) that $G$ contains a $(i, j)_{(u_i, v_1)}$-path for each $j\in [9, \Delta + 5]$.
One sees from Lemma~\ref{lemma06} that $G$ contains no $(i, j)_{(u, v_2)}$-path for each $i\in [3, 8]$, $j\in T_2$,
and from $|T_2| = 5$ assuming w.l.o.g. that $11\not\in T_2\setminus C(u_9)$\footnote{If $C(u_9) = \{9\}\cup [11, 14]$, then $vv_2\to 11$ reduces to the case where $(T_2\setminus C(u_9))\setminus \{c(vv_2)\} \ne \emptyset$. }.

Let $uv_2\to 11$.
Since $v_1$ is $(\Delta, 7)$, i.e., $d(y_2)\le 5$,
by the similar discussion to show \ref{ijC6+x1x2}(2),
an acyclic edge $(\Delta + 5)$-coloring for $G$ can be obtained.

(2.2.2.) $c(x_1 x_2) \in [11, \Delta + 5]$, assuming w.l.o.g. $c(x_1 x_2) = 11$.
Since $([2, 8]\cup [10, 11])\subseteq C(x_2)$,
$|T_2|\ge \Delta + 5 - 11 - (d(x_2) - 9) + 1 = \Delta - d(x_2) + 4$,
$n_{6^+}(x_1)\ge |T_2| + 2\ge \Delta - d(x_2) + 6\ge 6$.

One sees that if $n_{6^+}(x_2) = 5$, then, for any $i\in [4, 5]$, $d(z_i)\le 5$,
and it follows from \ref{9ij}(2) that $G$ contains a $(i, j)_{(u_i, v_1)}$-path for each $j\in [10, \Delta + 5]$.

When $n_{6^+}(x_2) = 5$ and $d(y_9) = 2$, let $y'_9$ be the neighbor of $y_9$ other than $x_1$,
$y_9 y'_9\to ([10, \Delta + 5]\cup [5, 8]\cup \{3\})\setminus C(y'_9)$ and $x_1 y_9\to 4$
reduces to the Case 2.1.

In the other case, if $n_{6^+}(x_2) = 5$, then $d(y_9)\ge 3$;
together with $d(y_j)\ge 3$, $j\in [1, 3]\cup [10, \Delta + 5]$,
we have $n_2(x_1) = 0$ and $d(x_1) - n_2(x_1) = \Delta$.
It follows that ($A_{4.2}$) or ($A_{4.4}$) holds.

When $1, 9\in C(x_2)$, $|T_2|\ge \Delta + 5 -11 - (d(x_2) - 11) + 1 = \Delta - d(x_2) + 6$,
$n_{6^+}(x_1)\ge |T_2| + 2\ge \Delta - d(x_2) + 8\ge 8$.
Thus, none of ($A_{4.2}$) and ($A_{4.4}$) holds.
In the other case, $|\{1, 9\}\cap C(x_2)|\le 1$.

One sees that if $1/9\in C(x_2)$,
then $|T_2|\ge \Delta + 5 -11 - (d(x_2) - 10) + 1 = \Delta - d(x_2) + 5$,
$n_{6^+}(x_1)\ge |T_2| + 2\ge \Delta - d(x_2) + 7\ge 7$ and ($A_{4.4}$) holds.

When $C(x_2)\cap \{1, 9\} = \beta$, since ($A_{4.4}$) holds, we have $v_1 = x_1$, $v_2 = x_2$.
It follows that $T_2 = \{10\}\cup [12, 15]$,
$C(v_2) = [2, 8]\cup \{10, \beta\}\cup ([10, \Delta + 5]\setminus T_2)$, and $d(y_2)\le 5$, $d(z_\beta)\le 5$.
One sees clearly from Lemma~\ref{lemma06} that $G$ contains no $(i, j)_{(u, v_2)}$-path for each $i\in [3, 8]$, $j\in T_2$.
Assume w.l.o.g. $12\not\in T_2\setminus C(z_\beta)$\footnote{If $C(z_\beta) = \{\beta\}\cup [12, 15]$,
then $v v_2\to 12$ reduces to the case where $(T_2\setminus C(z_\beta))\setminus \{c(vv_2)\} \ne \emptyset$. }.
Let $u v_2\to 12$.
Since $d(y_2)\le 5$,
by the similar discussion to show \ref{ijC6+x1x2}(2),
an acyclic edge $(\Delta + 5)$-coloring for $G$ can be obtained.

In the other case, $1, 9\not\in C(x_2)$.
It follows from \ref{ijT2}(2) and \ref{ijC6+x1x2}(1) that for each $i\in [6, 8]$,
$G$ contains a $(i, j)_{(x_1, x_2)}$-path for each $j\in T_2\cup \{1, 9\}$, $1\in C(u_i)$, $6, 7, 8\in C(y_9)$.
\begin{itemize}
\parskip=0pt
\item
    Assume that there exists an $j\in T_2$ such that $G$ contains neither a $(4, j)_{(u, x_2)}$-path nor a $(5, j)_{(u, x_2)}$-path, say $G$ contains neither a $(4, 12)_{(u, x_2)}$-path nor a $(5, 12)_{(u, x_2)}$-path.

    When $2\not\in B_3$ or $d(y_2)\le 6$, $ux_2\to 12$, and $uv\overset{\divideontimes}\to 2$,
    or by the similar discussion to show \ref{ijC6+x1x2}(2), an acyclic edge $(\Delta + 5)$-coloring for $G$ can be obtained.

    In the other case, $2\in B_3$ and $d(y_2)\ge 6$.
    Since $n_{6^+}(x_1)\ge |T_2| + |\{u, x_2, y_2\}|= |T_2| + 3\ge \Delta - d(x_2) + 7\ge 7$,
    thus ($A_{4.4}$) holds.
    Hence, $v_1 = x_1$, $v_2 = x_2$, $T_2 = \{10, 12, 13, 14\}$, $C(v_2) = C\setminus \{1, 9, 12, 13, 14\}$,
    and $d(z_4)\le 5$, $d(z_5)\le 5$, $d(y_9)\le 5$.
    It follows from \ref{9ij}(2) that for any $i\in [4, 5]$,
    $G$ contains a $(i, j)_{(u_i, v_1)}$-path for each $j\in [10, \Delta + 5]$,
    and from \ref{jx1notx2} that $C(y_9) = \{9, 6, 7, 8, 1\}$ and $G$ contains a $(3, 1)_{(y_9, u_1)}$-path.

    If $G$ contains no $(i, 1)_{(u_i, x_1)}$-path, then $v_1 y_9\to i$ reduces to the Case 2.1.
    Otherwise, for each $i\in [4, 5]$, $G$ contains a $(i, 1)_{(u_i, x_1)}$-path.
    If $1\not\in C(u_9)$, then $(u u_9, u v_1, v_1 y_9, v v_1)\to (1, 9, 4, 1)$ and by Lemma~\ref{lemma17},
    we are done.
    Otherwise, $1\in C(u_9)$.

    When mult$_{S_u}(j)\le 1$ for some $j\in [10, 14]$,
    one sees that there exists an $i\in [6, 9]$ such that $G$ contains no $(j, k)_{(u_j, u)}$-path for any $k\in C(u)$, $(u u_i, u v)\overset{\divideontimes}\to (j, i)$.
    When there exists a $j\in [15, \Delta + 5]\setminus S_u$,
    one sees that there exists an $i\in [6, 9]$ such that $G$ contains no $(j, k)_{(u_j, u)}$-path for any $k\in C(u)$, $(u u_i, u v)\overset{\divideontimes}\to (j, i)$.
    In the other case, for each $j\in [10, \Delta + 5]$, mult$_{S_u}(j)\ge 1$, i.e., $[10, \Delta + 5]\subseteq S_u$, and if $j\in [10, 14]$, mult$_{S_u}(j)\ge 2$.

    Note that $2\times |[10, 14]| + |[15, \Delta + 5]| = \Delta + 1\ge 11$.
    Here we do not distinguish the colors of $\{10, 12, \ldots, \Delta + 5\}$, and refer them to as $\alpha_{10}, \alpha_{11}, \ldots, \alpha_{(\Delta + 5)}$, respectively.%
    \footnote{That is, $(\alpha_{10}, \alpha_{12}, \ldots, \alpha_{(\Delta + 5)})$ is an arbitrary permutation of $(10, 12, \ldots, \Delta + 5)$.}
    For each $i\in [6, 9]$, one sees from $1\in C(u_i)$ that $s_i\le 3$.
    It follows from \ref{S1S2} that there exists $i_1, i_2\in [6, 9]$ such that $s_{i_1} = s_{i_2} = 3$;
    for each $i\in \{i_1, i_2\}$,
    $G$ contains a $(1, j)_{(u_i, v_1)}$-path for each $j\in T_i$,
    and since $1\not\in C(v_2)$, we have $11\in C(u_i)$.
    It follows that $C(u_{i_1}) = \{i_1, 1, 11, \alpha_{10}, \alpha_{12}\}$,
    and $C(u_{i_2}) = \{i_2, 1, 11, \alpha_{13}, \alpha_{14}\}$.
    However, there exists an $i\in \{i_1, i_2\}$ such that $G$ contains no $(\alpha_{15}, j)_{(u_i, u)}$-path for each $j\in C(u)$,
    a contradiction to \ref{S1S2}.
    \item
    Assume that for each $j\in T_2$, $G$ contains a $(4, j)_{(u, x_2)}$-path or a $(5, j)_{(u, x_2)}$-path,
    say $G$ contains a $(5, 12)_{(u, x_2)}$-path.
    It follows from \ref{9ij} that $G$ contains a $(5, j)_{(x_1, x_2)}$-path for each $j\in T_2$, $T_2\subseteq C(z_5)$.
    When $G$ contains no $(9, 5)_{(x_1, x_2)}$-path, $(v x_2, v x_1, u v)\overset{\divideontimes}\to (9, 5, 12)$.
    In the other case, $G$ contains a $(9, 5)_{(x_1, x_2)}$-path and $9\in C(z_5)$, $5\in C(y_9)$.

    When $d(y_9) = 5$, i.e., $C(y_9) = [5, 9]$, $(x_1 y_9, v x_1)\to (3, 9)$,
    and by Lemma~\ref{lemma17}, we are done.
    In the other case, $d(y_9)\ge 6$ and $n_{6^+}(x_1)\ge |T_2| + |\{u, x_2, y_9\}|\ge \Delta - d(x_2) + 7\ge 7$.

    Since $d(z_5)\ge |T_2| + |\{5, 9\}|\ge 4 + 2 = 6$, it follows that $n_{6^+}(x_2)\ge 6$, and ($A_{4.2}$) holds.

    Thus, we have $v_1 = x_1$, $v_2 = x_2$ with $n_{6^+}(v_1)\ge \Delta - d(v_2) + 7$, a contradiction.
\end{itemize}

{\bf Case 3.} $u$ is $(11, 5^-)$ and ($A_{5.1}$) or ($A_{5.2}$) holds.

In this case, $C(u) = [1, 10]$.
Let $u_i$ be a $6^+$-vertex, $i\in [3, 5]$ and $u_j$ be a $5^-$-vertex, $j\in [6, 10]$.
One sees that $|T_2| = |[11, \Delta + 5]\setminus C(x_2)| + |\{11\}| = |[11, \Delta + 5]\setminus C(x_2)| + 1$,
and $|[11, \Delta + 5]| = \Delta + 5 - 10 = \Delta - 5$.
It follows from \ref{ijC6+x1x2} that $n_{6^+}(x_1)\ge 6$ and $n_{6^+}(x_2)\ge 5$.

Since $[11, \Delta + 5]\cup [1, 3]\subseteq C(x_1)$, $d(x_1)\ge 3 + \Delta + 5 - 10 = \Delta - 2$.
We discuss the following three subcases where $c(x_1 x_2)\in [4, 5]$, $c(x_1 x_2)\in S_{5^-}(u)$,
and $c(x_1 x_2)\in [12, \Delta + 5]$, respectively.

{Case 3.1.} $c(x_1 x_2)\in [4, 5]$, assuming w.l.o.g. $c(x_1 x_2) = 4$.

One sees that $[1, 4]\cup [11, \Delta + 5]\subseteq C(x_1)$.
It follows that $|[5, 10]\cap C(x_1)|\le 1$.

(3.1.1.) $C(x_1)\cap S_{6^-}(u) = \emptyset$, i.e., $S_1 = S_{5^-}(u) = [6, 10]$.
It follows from \ref{ijC6+x1x2} (6) that $n_{6^+}(x_2)\ge |S_1| + 2 = |S_{5^-}(u)| + 2 = 5 + 2 = 7$.
Since $[2, 4]\cup [6, 11]\subseteq C(x_2)$,
$|T_2|\ge \Delta + 5 - 11 - (d(x_2) - 9) + 1 = \Delta - d(x_2) + 4$,
and it follows from \ref{ijC6+x1x2} (3) that $n_{6^+}(x_1)\ge |T_2| + |\{u, x_2\}|\ge \Delta - d(x_2) + 6\ge 6$.
Thus, none of ($A_{5.1}$) and ($A_{5.2}$) holds.

(3.1.2.) $|C(x_1)\cap S_{6^-}(u)|= 1$,
assuming w.l.o.g. $C(x_1) = [1, 4]\cup [11, \Delta + 5]\cup \{10\}$, i.e., $S_1 = [6, 9]$.
It follows from \ref{ijC6+x1x2}(6) that $n_{6^+}(x_2)\ge |S_1| + 2 = |S_{5^-}(u)| + 2 = 4 + 2 = 6$.
Since $[2, 9]\cup \{11\}\subseteq C(x_2)$,
$|T_2|\ge \Delta + 5 - 11 - (d(x_2) - 9) + 1 = \Delta - d(x_2) + 4$,
and if $1/10\in C(x_2)$, $|T_2|\ge \Delta + 5 - 11 - (d(x_2) - 10) + 1 = \Delta - d(x_2) + 5$.
It follows from \ref{ijC6+x1x2}(3) that $n_{6^+}(x_1)\ge |T_2| + |\{u, x_2\}|\ge \Delta - d(x_2) + 6\ge 6$, and if $1/10\in C(x_2)$, then $n_{6^+}(x_1)\ge \Delta - d(x_2) + 7\ge 7$.

Hence, $n_{6^+}(x_1) = n_{6^+}(x_2) = 6$ and $d(y_{10})\le 5$, with $10\not\in C(x_2)$.
It follows from \ref{ijC6+x1x2}(1) that $C(y_{10}) = [6, 10]$.
Then $(x_1 y_{10}, v x_1)\to (3, 6)$ and $uv\overset{\divideontimes}\to T_2\setminus \{11\}$.

{Case 3.2.} $c(x_1 x_2)\in S_{5^-}(u)$, assuming w.l.o.g. $c(x_1 x_2) = 10$.

One sees that $[1, 3]\cup [10, \Delta + 5]\subseteq C(x_1)$.
It follows that $|[4, 5]\cup [6, 9]\cap C(x_1)|\le 1$.

(3.2.1.) $C(x_1)\cap S_{5^-}(u) = \{10\}$, i.e., $S_1 = S_{5^-}(u) = [6, 9]$.
It follows from \ref{ijC6+x1x2} (6) that $n_{6^+}(x_2)\ge |S_1| + 2 = |S_{5^-}(u)| + 2 = 4 + 2 = 6$.
Since $|\{4, 5\}\cap C(x_2)|\ge 1$ and $[2, 3]\cup [6, 11]\subseteq C(x_2)$,
$|T_2|\ge \Delta + 5 - 11 - (d(x_2) - 9) + 1 = \Delta - d(x_2) + 4$,
and if $4, 5\in C(x_2)$, then $|T_2|\ge \Delta + 5 - 11 - (d(x_2) - 10) + 1 = \Delta - d(x_2) + 5$.
It follows from \ref{ijC6+x1x2} (3) that $n_{6^+}(x_1)\ge |T_2| + |\{u, x_2\}|\ge \Delta - d(x_2) + 6\ge 6$, and if $4, 5\in C(x_2)$, then $n_{6^+}(x_1)\ge \Delta - d(x_2) + 7\ge 7$.

Hence, $n_{6^+}(x_1) = n_{6^+}(x_2) = 6$,
and there exists an $i\in (\{4, 5\}\cap C(x_1))\setminus C(x_2)$ with $d(y_i)\le 5$.
Assuming w.l.o.g. $4\in C(x_1)\setminus C(x_2)$ and $d(y_4)\le 5$.
It follows from \ref{ijC6+x1x2}(1) that $C(y_{4}) = \{4\}\cup S_1 = \{4\}\cup [6, 9]$.
Then $(x_1 y_{4}, v x_1, v x_2, u v)\overset{\divideontimes}\to (5, 7, 4, 11)$.

(3.2.2.) $|C(x_1)\cap S_{5^-}(u)|= 2$,
assuming w.l.o.g. $C(x_1) = [1, 3]\cup [9, \Delta + 5]$, i.e., $S_1 = [6, 8]$.
It follows from \ref{ijC6+x1x2}(6) that $n_{6^+}(x_2)\ge |S_1| + 2 = |S_{5^-}(u)| + 2 = 3 + 2 = 5$.

Since $[2, 8]\cup \{10, 11\}\subseteq C(x_2)$,
$|T_2|\ge \Delta + 5 - 11 - (d(x_2) - 9) + 1 = \Delta - d(x_2) + 4$,
It follows from \ref{ijC6+x1x2}(3) that $n_{6^+}(x_1)\ge |T_2| + |\{u, x_2\}|\ge \Delta - d(x_2) + 6\ge 6$.

When $1, 9\in C(x_2)$, $|T_2|\ge \Delta + 5 - 11 - (d(x_2) - 11) + 1 = \Delta - d(x_2) + 6$,
and it follows from \ref{ijC6+x1x2}(3) that $n_{6^+}(x_1)\ge |T_2| + |\{u, x_2\}|\ge \Delta - d(x_2) + 8\ge 8$.
Thus, none of ($A_{5.1}$) and ($A_{5.2}$) holds.
When $1/9\in C(x_2)$, $|T_2|\ge \Delta + 5 - 11 - (d(x_2) - 10) + 1 = \Delta - d(x_2) + 5$,
and it follows from \ref{ijC6+x1x2}(3) that $n_{6^+}(x_1)\ge |T_2| + |\{u, x_2\}|\ge \Delta - d(x_2) + 7\ge 7$.
It follows that ($A_{5.2}$) holds and $v_1 = x_1$, $v_2 = x_2$.
Since $d(v_1) = 12$ and $\Delta \ge d(v_1)\ge 13$, $n_{6^+}(v_1)\ge \Delta - d(v_2) + 7\ge 13 - 11 + 7 = 9$, a contradiction.
In the other case, $1, 9\not\in C(x_2)$.

If $n_{6^+}(x_1)\ge 7$, i.e., ($A_{5.2}$) holds and $v_1 = x_1$, $v_2 = x_2$,
then, since $d(v_1) = 12$ and $\Delta \ge d(v_1)\ge 13$,
we have $n_{6^+}(v_1)\ge \Delta - d(x_2) + 6\ge 13 - 11 + 6\ge 8$, a contradiction.
Otherwise, $n_{6^+}(x_1)= 6$.
It follows that ($A_{5.1}$) holds and $d(x_1) = d(x_2) = \Delta\in [11, 12]$, $T_2 = [11, 14]$,
$C(x_2) = [2, 8]\cup [10, 11]\cup [15, \Delta + 5]$.

Since $n_{6^+}(x_2)\le 6$, there exists an $i\in [4, 5]$ such that $d(z_i)\le 5$.
Assume w.l.o.g. $d(z_4)\le 5$ and it follows from \ref{9ij}(2) that $G$ contains a $(4, i)_{(u_4, v_1)}$-path for each $i\in [11, \Delta + 5]$.

One sees that $G$ contains no $(i, )_{(u, x_2)}$-path for each $i\in \{1\}\cup [3, 4]\cup [6, 8]$, $j\in T_2$.
If $T_2\setminus C(u_{10})\ne \emptyset$, then let $S = T_2\setminus C(u_{10})$;
if $C(u_{10}) = [10, 14]$, then $uu_{10}\to 15$ and let $S = T_2$.
\begin{itemize}
\parskip=0pt
\item
    Assume that $G$ contains no $(5, j)_{(u, x_2)}$-path for some $j\in S$.
    First, $u x_2\to j$ and if $j = 11$, then $v x_2\to 12$.
    Since $d(y_2)\le 5$, by the similar discussion to show \ref{ijC6+x1x2}(2),
    an acyclic edge $(\Delta + 5)$-coloring for $G$ can be obtained.
\item
    Assume that $G$ contains a $(5, j)_{(u, x_2)}$-path for each $j\in S$.
    If $5\not\in C(y_9)$, then $(v x_2, v x_1)\to (9, 5)$ and $uv\overset{\divideontimes}\to S$.
    Otherwise, $5\in C(y_9)$ and since $d(y_9)\le 5$,
    it follows from \ref{ijC6+x1x2}(1) that $C(y_9) = \{5, 9\}\cup S_1 = [5, 9]$.
    Then $(x_1 y_9, v x_1)\to (3, 9)$, and by Lemma~\ref{lemma17}, we are done.
\end{itemize}

{Case 3.3.} $c(x_1 x_2)\in [12, \Delta + 5]$, assuming w.l.o.g. $c(x_1 x_2) = 12$.

One sees that $[1, 3]\cup [11, \Delta + 5]\subseteq C(x_1)$.
It follows that $|[4, 5]\cup [6, 10]\cap C(x_1)|\le 2$.

(3.3.1.) $C(x_1)\cap S_{5^-}(u) = \emptyset$, i.e., $S_1 = S_{5^-}(u) = [6, 10]$.
It follows from \ref{ijC6+x1x2} (6) that $n_{6^+}(x_2)\ge |S_1| + 2 = 5 + 2 = 7$.
Since $[2, 3]\cup [6, 12]$,
$|T_2|\ge \Delta + 5 - 12 - (d(x_2) - 9) + 1 = \Delta - d(x_2) + 3$,
and if $1/4/5\in C(x_2)$, then $|T_2|\ge \Delta + 5 - 12 - (d(x_2) - 10) + 1 = \Delta - d(x_2) + 4$.
It follows from \ref{ijC6+x1x2} (3) that $n_{6^+}(x_1)\ge |T_2| + |\{u, x_2\}|\ge \Delta - d(x_2) + 5\ge 5$,
and if $1/4/5\in C(x_2)$, then $n_{6^+}(x_1)\ge \Delta - d(x_2) + 6\ge 6$.

Hence, $n_{6^+}(x_1) = 5$, $n_{6^+}(x_2) = 7$.
It follows that for each $i\in [4, 5]$, $i\in C(x_1)\setminus C(x_2)$ with $d(y_i)\le 5$,
and thus by \ref{ijC6+x1x2}, $d(y_i)\ge |S_1| + 1\ge 5 + 1 = 6$, a contradiction.

(3.3.2.) $|C(x_1)\cap S_{5^-}(u)| = 1$,
assuming w.l.o.g. $C(x_1) \cap S_{5^-}(u) = \{10\}$, i.e., $S_1 = [6, 9]$.
It follows from \ref{ijC6+x1x2} (6) that $n_{6^+}(x_2)\ge |S_1| + 2 = 4 + 2 = 6$.
It follows that if $n_{6^+}(x_1)\ge 7$, then none of ($A_{5.1}$) and ($A_{5.2}$) holds, and we are done.

One sees from $|\{4, 5\}\cap C(x_1)|\le 1$ that $|\{4, 5\}\cap C(x_2)|\ge 1$,
and from \ref{jx1notx2} that if $10\not\in C(x_2)$, then $d(y_{10}) \ge |S_1| + 2 = 4 + 2 = 6$.
Assuming w.l.o.g. $4\in C(x_2)\setminus C(x_1)$.

Note that $[2, 4]\cup [6, 9]\cup [11, 12]\subseteq C(x_2)$.
When $|\{1, 5, 10\}\cap C(x_2)|\ge 2$, $|T_2|\ge \Delta + 5 - 12 - (d(x_2) - 11) + 1 = \Delta - d(x_2) + 5$,
and thus it follows from \ref{ijC6+x1x2}(3) that $n_{6^+}(x_1)\ge |T_2| + |\{u, x_2\}|\ge \Delta - d(x_2) + 7\ge 7$.
In the other case, $|\{1, 5, 10\}\cap C(x_2)|\le 1$.

When $1/5\in C(x_2)$, $|T_2|\ge \Delta + 5 - 12 - (d(x_2) - 10) + 1 = \Delta - d(x_2) + 4$,
and thus since $10\not\in C(x_2)$,
it follows from \ref{ijC6+x1x2}(3) that $n_{6^+}(x_1)\ge |T_2| + |\{u, x_2, y_{10}\}|\ge \Delta - d(x_2) + 7\ge 7$.

Hence, it suffices to assume that $1, 5\not\in C(x_2)$ and $5\in C(x_1)$.
It follows from \ref{ijC6+x1x2}(1) that $S_1\subseteq C(y_5)$.
If $d(y_5) = 5$, i.e., $C(y_5) = [5, 9]$,
then $(x_1 y_5, v x_1, v x_2, u v)\overset{\divideontimes}\to (4, 6, 5, 11)$.
Otherwise, $d(y_5)\ge 6$.
One sees that $|T_2|\ge \Delta + 5 - 12 - (d(x_2) - 9) + 1 = \Delta - d(x_2) + 3$,
and thus it follows from \ref{ijC6+x1x2}(3) that $n_{6^+}(x_1)\ge |T_2| + |\{u, x_2, y_5\}|\ge \Delta - d(x_2) + 6$.
\begin{itemize}
\parskip=0pt
\item
    $10\not\in C(x_2)$. Then $n_{6^+}(x_1)\ge |T_2| + |\{u, x_2, y_5, y_{10}\}|\ge \Delta - d(x_2) + 7\ge 7$.
\item
    $10\in C(x_2)$. Then, $|T_2|\ge \Delta + 5 - 12 - (d(x_2) - 10) + 1 = \Delta - d(x_2) + 4$,
    and thus it follows from \ref{ijC6+x1x2}(3) that $n_{6^+}(x_1)\ge |T_2| + |\{u, x_2, y_5\}|\ge \Delta - d(x_2) + 7\ge 7$.
\end{itemize}

(3.3.3.) $|C(x_1)\cap S_{5^-}(u)| = 2$,
assuming w.l.o.g. $C(x_1) = [1, 3]\cup [9, \Delta + 5]$, i.e., $S_1 = [6, 8]$.
It follows that $[2, 8]\cup [11, 12]\subseteq C(x_2)$, and from \ref{ijC6+x1x2}(6) that $n_{6^+}(x_2)\ge |S_1| + 2 = 3 + 2 = 5$.

When $|\{1, 9, 10\}\cap C(x_2)|\ge 2$, $|T_2|\ge \Delta + 5 - 12 - (d(x_2) - 11) + 1 = \Delta - d(x_2) + 5$,
and it follows from \ref{ijC6+x1x2}(3) that $n_{6^+}(x_1)\ge |T_2| + |\{u, x_2\}|\ge \Delta - d(x_2) + 7\ge 7$.
It follows that ($A_{5.2}$) holds and $v_1 = x_1$, $v_2 = x_2$.
Since $d(v_2) = 11$ and $\Delta \ge d(v_1)\ge 13$,
$n_{6^+}(v_1)\ge \Delta - d(v_2) + 7\ge 13 - 11 + 7 = 9$, a contradiction.
In the other case, $|\{1, 9, 10\}\cap C(x_2)|\le 1$.

First, assume that $\{1, 9, 10\}\cap C(x_2) = \{i_0\}$.
Then $|T_2|\ge \Delta + 5 - 12 - (d(x_2) - 10) + 1 = \Delta - d(x_2) + 4$,
and thus it follows from \ref{ijC6+x1x2}(3) that $n_{6^+}(x_1)\ge |T_2| + |\{u, x_2\}|\ge \Delta - d(x_2) + 6$.
When $n_{6^+}(x_1)\ge 7$, it follows that ($A_{5.2}$) holds and $v_1 = x_1$, $v_2 = x_2$,
since $d(v_2) = 11$ and $\Delta \ge d(v_1)\ge 13$,
$n_{6^+}(v_1)\ge \Delta - d(v_2) + 6\ge 13 - 11 + 6 = 8$, a contradiction.
In the other case, $n_{6^+}(x_1) = 6$ and $d(x_2) = \Delta \in [11, 12]$, $|T_2| = 4$ with $d(z_i)\le 5$, $i\in \{4, 5, i_0\}$.

When $C(u_i) = \{i\}\cup T_2$ for some $i\in [9, 10]$,
it follows from \ref{jx1notx2}(2) that $3\not\in C(y_i)$,
thus, $(u u_i, u v)\overset{\divideontimes}\to (12, i)$.
In other case, for each $i\in [9, 10]$, $T_2\setminus C(u_i)\ne \emptyset$.

For each $i\in [4, 5]$,
if $C(z_i) = \{i\}\cup T_2$, then $v x_1\to i$, $x_2 z_i\to \{9, 10\}\setminus C(x_2)$,
and thus an acyclic edge coloring of $G$ can be obtained if $G$ contains no $(i, k)_{(u_i, x_1)}$-path for some $k\in [11, \Delta + 5]$.
Together with \ref{9ij}(2),
for each $i\in [4, 5]$, $G$ contains an $(i, j)_{(u_i, v_1)}$-path for each $j\in [k, \Delta + 5]$.
It follows that $G$ contains no $(i, j)_{(u_i, v_1)}$-path for each $i\in [3, 8]$, $j\in T_2$.

If $i_0 \in [9, 10]$, then let $S = T_2\setminus C(u_{i_0})$;
if $T_2\setminus C(z_{1})\ne \emptyset$, then let $S = T_2\setminus C(z_{i_0})$;
if $C(z_{1}) = \{1\}\cup T_2$, then $x_2 z_{1}\to 9$ and let $S = T_2\setminus C(u_9)$.

Then $u x_2\to j\in S$ and if $j = 11$, then $v x_2\to 12$.
Since $d(y_2)\le 5$, by the similar discussion to show \ref{ijC6+x1x2}(2),
an acyclic edge $(\Delta + 5)$-coloring for $G$ can be obtained.

Next, assume that $1, 9, 10\not\in C(x_2)$.
Ones sees that $|T_2|\ge \Delta + 5 - 12 - (d(x_2) - 9) + 1 = \Delta - d(x_2) + 3$.
It follows from \ref{jx1notx2} that for each $j\in [9, 10]$,
if $d(y_j) = 5$, then $3\not\in C(y_j)$,
and $G$ contains a $(3, i)_{(y_j, x_1)}$-path for some $i\in C(x_1)\setminus [k, \Delta + 5]$.

When there exists $i_1, i_2\in \{2, 9, 10\}$ such that $d(y_{i_1})\ge 6$ and $d(y_{i_2})\ge 6$,
it follows from \ref{ijC6+x1x2}(3) that $n_{6^+}(x_1)\ge |T_2| + |\{u, x_2, y_{i_1}, y_{i_2}\}|\ge \Delta - d(x_2) + 7$.
Hence, $n_{6^+}(x_1)= 7$, ($A_{5.2}$) holds and $v_1 = x_1$, $v_2 = x_2$.
Thus, since $d(v_2) = 11$ and $\Delta \ge d(v_1)\ge 13$,
$n_{6^+}(v_1)\ge \Delta - d(v_2) + 7\ge 13 - 11 + 7 = 9$, a contradiction.

In the other case, at most one of $\{y_2, y_9, y_{10}\}$ is a $6^+$-vertex,
and at most one of $\{y_9, y_{10}\}$ is a $6^+$-vertex.
Assuming w.l.o.g. $d(y_{9})\le 5$.
It follows that $C(y_9) = [6, 9]\cup \{i_9\}$ and $G$ contains a $(3, i_9)_{(y_9, x_1)}$-path,
where $i_9\in \{1, 2, 10\}$.

One sees clearly that $4, 5\not\in C(y_9)$.
For $i\in [4, 5]$,
if $G$ contains no $(i, 11)_{(u_i, x_1)}$-path,
then $(v x_1, v x_2, u v)\overset{\divideontimes}\to (i, 6, 11)$;
otherwise, $G$ contains a $(i, 11)_{(u_i, x_1)}$-path.
Thus an acyclic edge coloring of $G$ can be obtained if $v x_1\to i$ and $G$ contains no $(i, j)_{(u_i, x_1)}$-path for some $j\in [12, \Delta + 5]$.

Hence, for each $i\in [4, 5]$, $G$ contains an $(i, j)_{(u_i, v_1)}$-path for each $j\in [11, \Delta + 5]$.
It follows that $G$ contains no $(i, j)_{(u_i, v_1)}$-path for each $i\in [3, 8]$, $j\in T_2$.
When $G$ contains no $(i, 2)_{(u_i, y_2)}$-path for some $i\in [3, 5]$,
$(u x_2, v x_1, u v)\overset{\divideontimes}\to (13, i, 2)$.
When $d(y_2)\le 5$, $u x_2\to 13$, and by the similar discussion to show \ref{ijC6+x1x2}(2),
an acyclic edge $(\Delta + 5)$-coloring for $G$ can be obtained.

In the other case, $d(y_2)\ge 6$ and $G$ contains a $(i, 2)_{(u_i, y_2)}$-path for each $i\in [3, 5]$.
It follows that $d(y_{10})\le 5$, $C(y_{10}) = [6, 8]\cup \{10, 1\}$, $G$ contains a $(3, 1)_{(y_{10}, x_1)}$-path,
and $C(y_9) = [6, 8]\cup \{9, 1\}$, $G$ contains a $(3, 1)_{(y_9, x_1)}$-path, a contradiction.

This finishes the inductive step for the case where $G$ contains the configuration ($A_{2}$)--($A_5$).


\subsection{Configurations ($A_{6}$)--($A_{8}$)}
In this subsection we prove the inductive step for the case where $G$ contains one of the configurations ($A_{6}$)--($A_{8}$).
We have the following revised ($A_{6}$)--($A_{8}$):
An edge $u v$ with $d(u) = k\in [4, 5]$ such that at least one of the configurations ($A_{6}$)--($A_{8}$) occurs.

Let $u_1$, $u_2$, $\ldots$, $u_{k - 1}$ be the neighbors of $u$ other than $v$.
Recall that $H$ has an acyclic edge $(\Delta + 5)$-coloring $c(\cdot)$ using the color set $C = \{1, 2, \ldots, \Delta + 5\}$.
One sees that if re-coloring some edges $E'\subseteq E(H)$ incident to $u$ or $v$ does not produce any new bichromatic cycles, then a new acyclic edge $(\Delta + 5)$-coloring $c'(\cdot)$ for $H$ will be obtained.

To introduce some commonalities of any acyclic edge $(\Delta + 5)$-coloring for $H$,
for simplicity and clarity, for some acyclic edge $(\Delta + 5)$-coloring $c(\cdot)$ for $H$,
we introduce the following symbols.

Let $C_{u v} = C(u)\cup C(v)$, $T_{u v} = C\setminus (C(u)\cup C(v))$,
$A_{u v} = C(u)\cap C(v)$,
and $a_{u v} = |A_{u v}|$, $t_{u v} = |T_{u v}|$.

Let $A_{u v} = \{i_1, \ldots, i_{a_{u v}}\}$,
$U_{u v} = C(u)\setminus A_{u v} = \{i_{a_{u v} + 1}, \ldots, i_{k - 1}\}$,
and if $d(v)\ge a_{u v} +2$,
then $V_{u v} = C(v)\setminus A_{u v} = [i_{d(u)}, \ldots, i_{d(u) + d(v) - a_{u v} - 2}]$.

Assume that $c(u u_{i_j}) = c(v v_{i_j}) = i_j$ for any $j\in [1, a_{u v}]$,
and $c(u u_i)= i$ for $i \in U_{u v}$, $c(v v_j) = j$ for $j\in V_{u v}$.
It follows from \ref{prop3001} that $1\le a_{uv} \le 4$,
and $T_{u v}\subseteq \bigcup _{i\in C(u)\cap C(v)}B_i$.

For $i\in C(u)$, let $T_i = T_{u v}\setminus C(u_i)$, $t_i = |T_i|$;
$C_i = C(u_i)\cap C(u)$, $W_i = C(u_i)\cap C_{u v}$,
and $c_i = |C_i|$, $w_i = |W_i|$.
For $j\in C(v)$, let $R_j = T_{u v}\setminus C(v_j)$, $r_j = |R_j|$,
$D_j = C(v_j)\cap C(v)$, and $D_j = |D_j|$.
Let $$S_u = \biguplus_{i\in [1, k - 1]}(C(u_i)\setminus \{c(u u_i)\}),
S_v = \biguplus_{x\in N(v)\setminus \{v\}}(C(x)\setminus \{c(vx)\}).  $$

One sees clearly that if ($A_7$) holds, then
\begin{equation}
\label{eq16}
\sum_{x\in N(u)\setminus \{v\}}d(x) = \sum_{i\in [1, 4]}d(u_i)\le \begin{cases}
                \Delta + 20 < 2\Delta + 12, \mbox{ if } \Delta\ge 9;\\
				\Delta + 20 = 2\Delta + 12, \mbox{ if } \Delta = 8;\\
				2\Delta + 13 < 2\Delta + 14, \mbox{ if } \Delta = 7;\\
				2\Delta + 12 < 2\Delta + 13, \mbox{ if } \Delta = 6,
				\end{cases}
\end{equation}
and if ($A_8$) holds, then
\[
\sum_{i\in [1, 4]}d(u_i)\le \begin{cases}
				\Delta + 19 < \Delta + 20 \le 2\Delta + 12, \mbox{ if } \Delta\ge 8;\\
				2\Delta + 13 < 2\Delta + 14, \mbox{ if } \Delta = 7;\\
				2\Delta + 12 < 2\Delta + 13, \mbox{ if } \Delta = 6.
             \end{cases}
\]

Since $T_{u v}\subseteq \bigcup _{i\in A_{u v}}B_i$,
mult$_{S_u}(j)\ge 1$ for each $j\in T_{u v}$,
and in the sequel we continue the proof with the following \ref{prop3209}, or otherwise we are done:
\begin{proposition}
\label{prop3209}
For each $i\in A_{u v}$,
\begin{itemize}
\parskip=0pt
\item[{\rm (1)}]
	if $a_{u v} = 1$, then $T_{u v}\subseteq B_i\subseteq C(u_i)\cap C(v_i)$ and $T_i = \emptyset$, $t_i = 0$;
\item[{\rm (2)}]
    if $t_i\ge 1$, i.e., $T_i \ne\emptyset$, then $a_{u v}\ge 2$, and for each $j\in T_i$,
    $j\in B_{i_j}$ for some $i_j\in A_{u v}\setminus \{i\}$;
\item[{\rm (3)}]
    if $r_i\ge 1$, i.e., $R_i \ne\emptyset$, then $a_{u v}\ge 2$, and for each $j\in R_i$,
    $j\in B_{i_j}$ for some $i_j\in A_{u v}\setminus \{i\}$;
\item[{\rm (4)}]
    $T_i\subseteq \bigcup _{j\in A_{u v}\setminus\{i\}}B_j$ and $T_i\subseteq \bigcup _{j\in A_{u v}\setminus\{i\}}C(u_j)$.
\end{itemize}
\end{proposition}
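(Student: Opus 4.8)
The plan is to first isolate the single structural fact that does all the work, and then read off the four parts mechanically. For any $i \in A_{u v}$ and any color $j \in B_i$, I claim $j \in C(u_i) \cap C(v_i)$. Indeed, $j \in B_i$ precisely captures the obstruction to setting $c(u v) = j$: if one colors $u v \to j$, then the unique $i$-colored edges $u u_i$ and $v v_i$, together with an $(i, j)_{(u_i, v_i)}$-path, close up into a bichromatic $(i, j)$-cycle through $u v$. Walking around this cycle, at the vertex $u_i$ one incident cycle-edge is $u u_i$ (color $i$), so by the alternation forced on a proper bichromatic cycle the other cycle-edge at $u_i$ is colored $j$; hence $j \in C(u_i)$. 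The symmetric argument at $v_i$ gives $j \in C(v_i)$. Thus $B_i \subseteq C(u_i) \cap C(v_i)$, a fact I will use repeatedly; it is exactly the observation already recorded when $B_i$ was introduced.

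Granting this, part (1) is immediate. If $a_{u v} = 1$, say $A_{u v} = \{i\}$, then \ref{prop3001} reads $T_{u v} \subseteq \bigcup_{l \in A_{u v}} B_l = B_i$, and the structural fact upgrades this to $T_{u v} \subseteq B_i \subseteq C(u_i) \cap C(v_i)$. Since $T_i = T_{u v} \setminus C(u_i)$ and $T_{u v} \subseteq C(u_i)$, we conclude $T_i = \emptyset$ and $t_i = 0$, as required.

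For parts (2) and (3) I would argue by contraposition and then relocate the witness. If $T_i \ne \emptyset$, then $a_{u v} \ge 2$, for otherwise $A_{u v} = \{i\}$ and part (1) forces $T_i = \emptyset$. Now fix $j \in T_i$, so $j \in T_{u v}$ but $j \notin C(u_i)$. The inclusion $B_i \subseteq C(u_i)$ gives $j \notin B_i$, while \ref{prop3001} gives $j \in \bigcup_{l \in A_{u v}} B_l$; hence $j \in B_{i_j}$ for some $i_j \in A_{u v} \setminus \{i\}$. Part (3) is proved identically, using $R_i = T_{u v} \setminus C(v_i)$ and $B_i \subseteq C(v_i)$ in place of $C(u_i)$ (the case $a_{u v} = 1$ again yields $T_{u v} \subseteq C(v_i)$, so $R_i = \emptyset$). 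Finally, part (4) is just a repackaging of part (2): it shows $T_i \subseteq \bigcup_{l \in A_{u v} \setminus \{i\}} B_l$, and applying $B_l \subseteq C(u_l)$ to each term yields $T_i \subseteq \bigcup_{l \in A_{u v} \setminus \{i\}} C(u_l)$.

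The only real subtlety, and the step I would write out most carefully, is the endpoint analysis in the structural fact: one must confirm that the $(i, j)$-path genuinely reaches $u_i$ and $v_i$ through $j$-colored edges rather than degenerating (for instance by returning immediately to $u$ or $v$), so that the deductions $j \in C(u_i)$ and $j \in C(v_i)$ are legitimate. Everything after that is bookkeeping with \ref{prop3001} and the inclusions $B_l \subseteq C(u_l) \cap C(v_l)$.
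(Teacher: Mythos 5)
Your proposal is correct and follows essentially the same route the paper takes: the paper treats \ref{prop3209} as an immediate consequence of \ref{prop3001} together with the already-recorded inclusion $B_i\subseteq C(u_i)\cap C(v_i)$, and your write-up simply makes that derivation explicit (including the correct observation that for $j\in T_{uv}$ the $(i,j)_{(u_i,v_i)}$-path must leave $u_i$ and reach $v_i$ along $j$-colored edges, since $j\notin C(u)\cup C(v)$ forbids the path from passing through $u$ or $v$).
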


\begin{lemma}
\label{auvge2}
If $a_{u v}\ge 2$,
then an acyclic edge $(\Delta + 5)$-coloring for $H$ can be obtained such that $|C(u)\cap C(v)| \le a_{u v} - 1$,
or the following holds:
\begin{itemize}
\parskip=0pt
\item[{\rm (1)}]
    for each $i\in A_{u v}$,
    \begin{itemize}
    \parskip=0pt
    \item[{\rm (1.1)}]
        for each $j\in T_i\ne \emptyset$,
        $H$ contains a $(i_j, j)_{(u_{i_j}, v_{i_j})}$-path and a $(a_i, j)_{(u_i, u_{a_i})}$-path for some $i_j\in A_{u v}$, $a_i\in C_i\setminus\{i, i_j\}$, and $j\in C(u_{i_j})\cap C(u_{a_i})$;
   \item[{\rm (1.2)}]
        for each $j\in R_i\ne \emptyset$,
        $H$ contains a $(i_j, j)_{(u_{i_j}, v_{i_j})}$-path and a $(b_i, j)_{(v_i, v_{b_i})}$-path for some $i_j\in A_{u v}$, $b_i\in D_i\setminus \{i, i_j\}$, and $j\in C(v_{i_j})\cap C(v_{b_i})$;
   \end{itemize}
\item[{\rm (2)}]
    for each $j\in T_{u v}$, mult$_{S_u}(j)\ge 2$ and mult$_{S_v}(j)\ge 2$.
\item[{\rm (3)}]
   For each $\alpha\in U_{u v}$,
   if there exists an $j\in T_\alpha$ such that recoloring $u u_{\alpha}\to j$ cannot obtain a bichromatic cycle,
   then $\alpha\in \bigcup_{i\in A_{u v}}B_{i}$,
   mult$_{S_u}(\alpha)\ge 2$,
   and for each $i\in A_{u v}$, $\alpha\in C(u_i)$, or there exists a $a_i\in C(u_i)$ such that $H$ contains a $(\alpha, a_i)_{(u_{a_i}, u_i)}$-path.
\item[{\rm (4)}]
   For each $\beta\in V_{u v}$, if there exists an $j\in R_{\beta}$ such that recoloring $v v_\beta\to j$ cannot obtain a bichromatic cycle,
   then $\beta\in \bigcup_{i\in A_{u v}}B_{i}$,
   mult$_{S_u}(\beta)\ge 2$,
   and for each $i\in A_{u v}$, $\beta\in C(u_i)$, or there exists a $a_i\in C(u_i)$ such that $H$ contains a $(\beta, a_i)_{(u_{a_i}, u_i)}$-path.
\end{itemize}
\end{lemma}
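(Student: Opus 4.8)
The plan is to fix the given acyclic edge $(\Delta+5)$-coloring $c(\cdot)$ of $H$ and argue by the dichotomy ``reduce or be obstructed.'' Assume throughout that \emph{no} admissible recoloring of edges incident to $u$ or $v$ yields an acyclic edge $(\Delta+5)$-coloring of $H$ with $|C(u)\cap C(v)|\le a_{u v}-1$ (otherwise we are in the first alternative and are done). Each clause of (1)--(4) will then be read off as the precise obstruction forcing a natural overlap-reducing recoloring to close a bichromatic cycle. Recall from \ref{prop3001} that $1\le a_{u v}\le 4$ and $T_{u v}\subseteq\bigcup_{i\in A_{u v}}B_i$, and from \ref{prop3209} that whenever $T_i\ne\emptyset$ (resp.\ $R_i\ne\emptyset$) we have $a_{u v}\ge 2$ and every $j\in T_i$ (resp.\ $j\in R_i$) lies in $B_{i_j}$ for some $i_j\in A_{u v}\setminus\{i\}$. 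Note also that since $j\in B_{i_j}$ means there is an $(i_j,j)_{(u_{i_j},v_{i_j})}$-path leaving $u_{i_j}$ along a $j$-edge, we always get $j\in C(u_{i_j})$ for free.

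First I would prove (1.1). Fix $i\in A_{u v}$ and $j\in T_i$, so $j\in T_{u v}$ and $j\notin C(u_i)$. By \ref{prop3209}(2) we obtain the $(i_j,j)_{(u_{i_j},v_{i_j})}$-path with $i_j\in A_{u v}\setminus\{i\}$. Now recolor $u u_i\to j$; this is proper, as $j\notin C(u)\cup C(v)$ and $j\notin C(u_i)$, and it deletes $i$ from $C(u)$, so by our standing assumption it must create a bichromatic cycle. Any new cycle runs through the recolored edge $u u_i$, hence is an $(a_i,j)$-cycle for some $a_i\in C(u)\cap C(u_i)=C_i$; the associated $(a_i,j)_{(u_i,u_{a_i})}$-path reaches $u_{a_i}$ along a $j$-edge, giving $j\in C(u_{a_i})$ and $a_i\ne i$. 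Finally $a_i\ne i_j$: by Lemma~\ref{lemma06} the $B_{i_j}$-path is the unique maximal $(i_j,j)$-path issuing from the single $j$-edge at $u_{i_j}$ and it terminates at $v_{i_j}$, so a short path-uniqueness argument shows no $(i_j,j)$-cycle can close through $u$ after the recoloring, forcing $a_i\in C_i\setminus\{i,i_j\}$ and $j\in C(u_{i_j})\cap C(u_{a_i})$. Clause (1.2) is the mirror statement, obtained verbatim by replacing $u,u_i,T_i$ with $v,v_i,R_i$ and recoloring $v v_i\to j$.

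Next I would establish (2). Let $j\in T_{u v}$; I claim $\mathrm{mult}_{S_u}(j)\ge 2$. By \ref{prop3001} we have $j\in B_i$ for some $i\in A_{u v}$, and then $j\in C(u_i)$ supplies one occurrence of $j$ in $S_u$, at the neighbor $u_i$. Suppose for contradiction $u_i$ were the only neighbor of $u$ carrying $j$, i.e.\ $j\notin C(u_m)$ (equivalently $j\in T_m$) for all $m\ne i$. Since $a_{u v}\ge 2$, choose $i'\in A_{u v}\setminus\{i\}$; then $j\in T_{i'}$ and, as in (1.1), recoloring $u u_{i'}\to j$ either reduces the overlap (done) or forces an $(a_{i'},j)_{(u_{i'},u_{a_{i'}})}$-path with $j\in C(u_{a_{i'}})$. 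Uniqueness of the carrier of $j$ forces $u_{a_{i'}}=u_i$, i.e.\ $a_{i'}=i$; but then this forced $(i,j)$-path and the $B_i$-path both emanate from the unique $j$-edge at $u_i$, and Lemma~\ref{lemma06} shows these are incompatible unless a further overlap-reducing recoloring is available, again a contradiction. Hence $\mathrm{mult}_{S_u}(j)\ge 2$, and the identical argument on the $v$-side yields $\mathrm{mult}_{S_v}(j)\ge 2$.

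Finally, (3) and (4) follow by transporting (1) and (2) across a single recoloring. For $\alpha\in U_{u v}=C(u)\setminus A_{u v}$, suppose some $j\in T_\alpha$ admits $u u_\alpha\to j$ with no new bichromatic cycle; the result is an acyclic edge $(\Delta+5)$-coloring $c'(\cdot)$ of $H$ in which $\alpha$ is now absent from $C(u)$ and (being in $U_{u v}$) from $C(v)$, hence a missing color, while both $|C(u)\cap C(v)|=a_{u v}$ and $\mathrm{mult}_{S_u}(\alpha)$ are unchanged (the only altered edge is $u u_\alpha$). Since any reduction in $c'$ would reduce the overlap of the original instance, $c'$ also lies in the no-reduction branch, so \ref{prop3001} and the already-proven (1),(2) applied to $c'$ with missing color $\alpha$ give $\alpha\in\bigcup_{i\in A_{u v}}B_i$, $\mathrm{mult}_{S_u}(\alpha)\ge 2$, and for each $i\in A_{u v}$ either $\alpha\in C(u_i)$ or an $(\alpha,a_i)_{(u_{a_i},u_i)}$-path; these transfer back to $c(\cdot)$. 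Clause (4) is the analogous statement for $\beta\in V_{u v}$ under $v v_\beta\to j$. The hard part I anticipate is the index bookkeeping inside (1.1) and (2): correctly reading off, from the direction of travel along each forced path, which neighbor carries color $j$, and repeatedly invoking Lemma~\ref{lemma06} to separate the forced blocking paths from the $B$-paths so as to exclude the degenerate coincidences $a_i=i_j$ and $a_{i'}=i$ — precisely the step where a naive count would fail.
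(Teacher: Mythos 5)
Your proposal is correct and follows essentially the same route as the paper's proof: for (1.1)/(1.2) recolor $uu_i\to j$ (resp.\ $vv_i\to j$) and read the blocking $(a_i,j)$- (resp.\ $(b_i,j)$-) path off the forced bichromatic cycle, using Lemma~\ref{lemma06} to rule out $a_i=i_j$; derive (2) from the two distinct carriers this produces; and obtain (3)/(4) by applying (1)--(2) to the recolored instance, exactly as the paper does with its ``by the similar discussion'' step. The only cosmetic difference is that you phrase (2) as a contradiction argument where the paper argues directly, but the content is the same.
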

\begin{proof}
For each $j\in T_{u v}$,
it follows from $j\in \bigcup _{i\in A_{u v}}B_i$ that there exists an $i_j\in A_{u v}$ such that $j\in B_{i_j}\subseteq C(u_{i_j})\cap C(v_{i_j})$.

When, for some $j\in T_i\ne \emptyset$, $G$ contains no $(j, l)_{(u_i, u_l)}$-path for each $l\in C(u)$,
$u u_i\to j$ reduces to the case where $|C(u)\cap C(v)| \le a_{u v} - 1$.
In the other case, for each $j\in T_i\ne \emptyset$,
$H$ contains a $(a_i, j)_{(u_i, u_{a_i})}$-path for some $a_i\in C_i$,
and $j\in C(u_{a_i})$.
By Lemma~\ref{lemma06}, $a_i\ne i_j$ and $j\in C(u_{i_j})\cap C(u_{a_i})$.

When, for some $j\in R_i\ne \emptyset$, $G$ contains no $(j, l)_{(v_i, v_l)}$-path for each $l\in C(v)$,
$v v_i\to j$ reduces to the case where $|C(u)\cap C(v)| \le a_{u v} - 1$.
In the other case, for each $j\in R_i\ne \emptyset$,
$H$ contains a $(b_i, j)_{(v_i, v_{b_i})}$-path for some $b_i\in D_i$,
and $j\in C(v_{b_i})$.
By Lemma~\ref{lemma06}, $b_i\ne i_j$ and thus $j\in C(v_{i_j})\cap C(v_{b_i})$.

For each $j\in T_{u v}$,
one sees clearly that if $j\in C(u_i)$ for each $i\in A_{u v}$, then mult$_{S_u}(j)\ge 2$;
if $j\not\in C(u_i)$ for some $i\in A_{u v}$, i.e., $j\in T_i$, then by (1.1),
there exists $i_j\in A_{u v}$, $a_i\in C(u)\setminus\{i, i_j\}$ such that $j\in C(u_{i_j})\cap C(u_{a_i})$.
Hence, mult$_{S_u}(j)\ge 2$.
Since recoloring $u u_{\alpha}\to j$ reduces to another acyclic edge $(\Delta + 5)$-coloring for $H$ such that $a_{u v}\ge 2$,
by the similar discussion to show (1) and (2), (3) holds.
Likewise, (4) holds.
\end{proof}

Let $T_0 = \{j\in T_{u v} \mid \mbox{mult}_{S_u}(j) \ge 3\}$, and $t_0 = |T_0|$.
If $a_{u v}\ge 2$,
Then
\begin{equation}\label{eq17}
\begin{aligned}
\sum_{i\in [1, k - 1]}d(u_i) = \sum_{i\in [1, k - 1]}w_i + \sum_{j\in T_{uv}}mult_{S_u}(j)\ge \sum_{i\in [1, k - 1]}w_i + 2t_{uv} + \sum_{j\in T_0}(mult_{S_u}(j) - 2).
\end{aligned}
\end{equation}

\begin{corollary}
\label{auv4}
If $a_{u v} = 4$, i.e., $d(u) = 5$, and $A_{u v} = [1, 4]$,
then an acyclic edge $(\Delta + 5)$-coloring for $H$ can be obtained such that $|C(u)\cap C(v)| \le 3$,
or the following holds:
\begin{itemize}
\parskip=0pt
\item[{\rm (1)}]
    for each $i\in A_{u v}$, $T_i\subseteq \bigcap_{j\in A_{u v}\setminus \{i\}} C(u_j)$,
    and $\bigcup_{j\in A_{u v}\setminus \{i\}} T_j\subseteq C(u_i)$;
\item[{\rm (2)}]
   for each $i, j\in A_{u v}$, $T_i\cap T_j = \emptyset$.
\end{itemize}
\end{corollary}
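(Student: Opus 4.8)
The plan is to deduce this corollary directly from Lemma~\ref{auvge2}, together with the uniqueness principle for maximal bichromatic paths in Lemma~\ref{lemma06}. First I would record the degenerate shape forced by $a_{uv}=4$: since $|C(u)|=d_H(u)=4$, the hypothesis $A_{uv}=[1,4]$ forces $C(u)=A_{uv}$ and $U_{uv}=\emptyset$, so the four edges $uu_1,\dots,uu_4$ carry exactly the four shared colors. I would then observe that the two asserted conclusions are really one statement rephrased: each of ``$T_i\subseteq\bigcap_{j\ne i}C(u_j)$'', ``$\bigcup_{j\ne i}T_j\subseteq C(u_i)$'', and ``$T_i\cap T_j=\emptyset$ for $i\ne j$'' is equivalent to saying that every color of $T_{uv}$ is absent from \emph{at most one} of $C(u_1),\dots,C(u_4)$, i.e.\ $\mathrm{mult}_{S_u}(j)\ge 3$ for every $j\in T_{uv}$. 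Thus the whole task reduces to upgrading the bound $\mathrm{mult}_{S_u}(j)\ge 2$ from \ref{auvge2}(2) to $\ge 3$, once we are in the case where no recoloring lowers $a_{uv}$.

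For the core argument I would assume that no recoloring reduces $a_{uv}$ to $3$, so the conclusions of Lemma~\ref{auvge2} hold, and suppose for contradiction that some $j\in T_{uv}$ is absent from two neighbors, say $u_r$ and $u_s$, hence present in $u_p$ and $u_q$ (where $\{p,q,r,s\}=\{1,2,3,4\}$). Applying \ref{auvge2}(1.1) at $i=r$ and at $i=s$ yields alternating paths $P_r=(a_r,j)_{(u_r,u_{a_r})}$ and $P_s=(a_s,j)_{(u_s,u_{a_s})}$ with $a_r,a_s\in\{p,q\}$, each reaching its terminal neighbor through that neighbor's unique $j$-colored edge; the same lemma also supplies a $B$-path $(i_0,j)_{(u_{i_0},v_{i_0})}$ with $i_0\in\{p,q\}$ leaving $u_{i_0}$ along $u_{i_0}$'s $j$-edge. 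Since $\{a_r,a_s,i_0\}\subseteq\{p,q\}$, by pigeonhole two of these three paths share the $j$-edge at a common vertex $u_c$ ($c\in\{p,q\}$), and they are distinct because their far endpoints lie among $u_r,u_s,v_c$.

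To finish, I would note that these two $(c,j)$-paths both lie on the unique maximal $(c,j)$-path through $u_c$'s $j$-edge (Lemma~\ref{lemma06}), whose continuation in the $c$-direction runs $u_c$–$u$ and stops at $u$ because $j\notin C(u)$. Hence the two paths are nested sub-paths emanating from $u_c$ in the $j$-direction, so one terminal vertex is interior to the other path. If that interior vertex is $u_r$ or $u_s$, it is incident to both a $c$- and a $j$-edge of the path, forcing $j\in C(u_r)$ or $j\in C(u_s)$ and contradicting $j\in T_r\cap T_s$. If instead it is $v_c$, then the path uses $v_c$'s unique $c$-edge, which is $vv_c$, so the path would continue through $v$ and need a $j$-edge at $v$; but $j\notin C(v)$, again a contradiction. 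Therefore no color of $T_{uv}$ is absent from two neighbors, so $\mathrm{mult}_{S_u}(j)\ge 3$ for all $j\in T_{uv}$, which is exactly (1) and (2).

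The step I expect to be most delicate is the bookkeeping that pins down which vertex's $j$-edge each path uses, and the verification that the three paths genuinely overlap in a shared $j$-edge (the pigeonhole over $\{a_r,a_s,i_0\}$); this rests on reading off from \ref{auvge2}(1.1) that the $B$-index $i_j$ obtained at $r$ differs from $a_r$ while both lie in $\{p,q\}$, so that whichever of $a_r=a_s$ or $a_r\ne a_s$ occurs, a matching $(c,j)$-path at $u_c$ is available. Once this shared-edge pairing is set up correctly, the contradiction is an immediate consequence of the maximal-path uniqueness in Lemma~\ref{lemma06} and the two facts $j\notin C(u)$ and $j\notin C(v)$.
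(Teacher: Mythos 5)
Your proposal is correct and follows essentially the same route as the paper: the paper also assumes some $j$ lies in two of the $T_i$ (say $T_1\cap T_3$ after normalizing $j\in B_2$), invokes Lemma~\ref{auvge2}(1.1) twice to force a $(4,j)_{(u_1,u_4)}$-path and a $(4,j)_{(u_3,u_4)}$-path, and derives the contradiction from the uniqueness of the maximal $(4,j)$-path at $u_4$, exactly your Case~A. Your write-up is merely more explicit about the sub-case where the two indices $a_r,a_s$ differ (so that one escape path collides with a $B$-path ending at $v_c$), which the paper leaves implicit.
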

\begin{proof}
Assuming $j\in T_1$ and w.l.o.g. that $H$ contains a $(2, j)_{(u_2, v)}$-path.
One sees that if $j\in T_i$ for some $i\in [3, 4]$, say $j\in T_3$,
then it follows from Lemma~\ref{auvge2} that $H$ contains a $(4, j)_{(u_1, u_4)}$-path and a $(4, j)_{(u_3, u_4)}$-path, a contradiction.
Hence, $j\in C(u_3)$ and (1), (2) hold.
\end{proof}
\begin{corollary}
\label{auv3}
If $a_{u v} = 3$, i.e., $A_{u v} = \{i_1, i_2, i_3\}$,
then either an acyclic edge $(\Delta + 5)$-coloring for $H$ can be obtained such that $|C(u)\cap C(v)| \le 2$,
or the following holds:
\begin{itemize}
\parskip=0pt
\item[{\rm (1)}]
   for each $j\in T_{i_1}\ne \emptyset$, mult$_{S_u}(j)\ge 2$,
   $G$ contains a $(i_j, j)_{(u_{i_j}, v_{i_j})}$-path for some $i_j\in \{i_2, i_3\}$, say $i_j = i_2$, and
   \begin{itemize}
   \parskip=0pt
   \item[{\rm (1.1)}]
        if $d(u) = 4$, then $i_3\in C(u_{i_1})$, $G$ contains a $(i_3, j)_{(u_{i_1}, u_{i_3})}$-path;
   \item[{\rm (1.2)}]
        if $d(u) = 5$, then $U_{u v} = \{i_4\}$, there exists a $a_{i_1}\in \{i_3, i_4\}\cap C(u_{i_1})$ such that $G$ contains a $(a_{i_1}, j)_{(u_{i_1}, u_{a_{i_1}})}$-path, and $j\in C(u_{a_{i_1}})$;
   \end{itemize}
\item[{\rm (2)}]
    for each $i\in A_{u v}$, $T_i\subseteq \bigcap_{j\in A_{u v}\setminus \{i\}} C(u_j)$ and $\bigcup_{j\in A_{u v}\setminus \{i\}} T_j\subseteq C(u_i)$;
\item[{\rm (3)}]
   for each $i, j\in A_{u v}$, $T_i\cap T_j = \emptyset$.
\item[{\rm (4)}]
   If $d(u) = 5$ and there exists an $j\in T_{i_4}$ such that recoloring $u u_{i_4}\to j$ cannot obtain a bichromatic cycle,
   then $i_4\in \bigcup_{i\in \{i_1, i_2, i_3\}}B_{i}$,
   mult$_{\bigcup_{i\in A_{u v}}C(u_i)}(i_4)\ge 2$,
   and for each $i\in \{i_1, i_2, i_3\}$, $i_4\in C(u_i)$, or there exists a $a_i\in C(u_i)$ such that $H$ contains a $(i_4, a_i)_{(u_{a_i}, u_i)}$-path.
\item[{\rm (5)}]
   For each $\beta\in V_{u v}$,
   if there exists an $j\in R_{\beta}$ such that recoloring $vv_\beta\to j$ cannot obtain a bichromatic cycle,
   then $\beta\in \bigcup_{i\in \{i_1, i_2, i_3\}}B_{i}$,
   mult$_{\bigcup_{i\in A_{u v}}C(u_i)}(\beta)\ge 2$,
   and for each $i\in \{i_1, i_2, i_3\}$, $\beta\in C(u_i)$, or there exists a $a_i\in C(u_i)$ such that $H$ contains a $(\beta, a_i)_{(u_{a_i}, u_i)}$-path.
\item[{\rm (6)}]
   for each $j\in T_{u v}$, if mult$_{S_u}(j) = 2$ with $j\in T_{i_1}$, then
   \begin{itemize}
   \parskip=0pt
   \item[{\rm (6.1)}]
        $i_1\in S_u$, $V_{u v}\subseteq S_u$;
   \item[{\rm (6.2)}]
        if $d(u) = 5$ with $U_{u v} = \{i_4\}$,
        then $H$ contains no $(j, i)_{(u_{i_4}, u_i)}$-path for each $i\in C(u)$ and mult$_{S_u}(i_4)\ge 2$.
   \end{itemize}
\item[{\rm (7)}]
   mult$_{S_u}(j)\ge 3$ for each $j\in T_{u v}$,
   or if there exists a $j\in T_{u v}$ such that mult$_{S_u}(j)= 2$, then $V_{u v}\subseteq S_u$.
\end{itemize}
\end{corollary}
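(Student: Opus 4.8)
The plan is to derive Corollary~\ref{auv3} as a specialization of Lemma~\ref{auvge2} to the case $a_{uv}=3$, supplemented by the counting identity~(\ref{eq17}) and the path-uniqueness principle of Lemma~\ref{lemma06}, in the same spirit as the short proof of Corollary~\ref{auv4}. Throughout I assume that the first alternative fails, i.e.\ no recoloring of edges incident to $u$ or $v$ yields an acyclic edge $(\Delta+5)$-coloring of $H$ with $|C(u)\cap C(v)|\le 2$; under this assumption all four conclusions of Lemma~\ref{auvge2} are available, and the task is to refine them into the more explicit statements (1)--(7).

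First I would establish (1). For $j\in T_{i_1}$, Lemma~\ref{auvge2}(1.1) already furnishes a $(i_j,j)_{(u_{i_j},v_{i_j})}$-path with $i_j\in A_{uv}$ and a $(a_{i_1},j)_{(u_{i_1},u_{a_{i_1}})}$-path with $a_{i_1}\in C_{i_1}\setminus\{i_1,i_j\}$ and $j\in C(u_{a_{i_1}})$, while Lemma~\ref{auvge2}(2) gives $\mathrm{mult}_{S_u}(j)\ge 2$; so (1) is obtained once I read off the range of $a_{i_1}$. When $d(u)=4$ we have $C(u)=A_{uv}=\{i_1,i_2,i_3\}$ and $U_{uv}=\emptyset$, hence $C_{i_1}\subseteq\{i_1,i_2,i_3\}$ forces $a_{i_1}=i_3$, which is (1.1); when $d(u)=5$ we have $U_{uv}=\{i_4\}$ and $C_{i_1}\subseteq\{i_1,i_2,i_3,i_4\}$, so $a_{i_1}\in\{i_3,i_4\}\cap C(u_{i_1})$, which is (1.2). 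The symmetric choice $i_j=i_3$ is handled by relabelling.

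Next I would prove (2), from which (3) is immediate: if some color lay in $T_i\cap T_{i'}$ then, since $T_i\subseteq C(u_{i'})$ by (2) while $T_{i'}=T_{uv}\setminus C(u_{i'})$, the two memberships contradict each other. To prove (2) I mimic Corollary~\ref{auv4}: the $(i_j,j)_{(u_{i_j},v_{i_j})}$-path of (1) already places $j$ in $C(u_{i_j})$, so one of the two required memberships is free; for the remaining color in $A_{uv}\setminus\{i_1,i_j\}$ I argue by contradiction. Assuming $j\in T_{i_1}\cap T_{i_3}$, the path in (1) applied to $T_{i_1}$ cannot route through $u_{i_3}$ (as $j\notin C(u_{i_3})$), so for $d(u)=5$ it must use $a_{i_1}=i_4$, yielding a $(i_4,j)$-path out of $u_{i_1}$; applying (1) to $T_{i_3}$ likewise produces a second $(i_4,j)$-path reaching $u_{i_4}$ from $u_{i_3}$, and these two maximal bichromatic $(i_4,j)$-paths sharing the endpoint $u_{i_4}$ violate Lemma~\ref{lemma06}. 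The case $d(u)=4$ is even shorter, since there $a_{i_1}$ is forced to $i_3$ with $j\in C(u_{i_3})$ outright, contradicting $j\in T_{i_3}$.

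Finally I would treat the recoloring and counting statements (4)--(7). Parts (4) and (5) specialize Lemma~\ref{auvge2}(3) and (4): a cycle-free recoloring $u u_{i_4}\to j$ (respectively $v v_\beta\to j$) produces another acyclic $(\Delta+5)$-coloring with $a_{uv}\ge 3$, to which the path-extraction of (1) applies, and recording which of the $C(u_i)$ the color $i_4$ (respectively $\beta$) then occupies gives the multiplicity bound $\mathrm{mult}_{\bigcup_{i\in A_{uv}}C(u_i)}(i_4)\ge 2$ (respectively $\mathrm{mult}_{\bigcup_{i\in A_{uv}}C(u_i)}(\beta)\ge 2$). Parts (6) and (7) are the genuinely delicate counting step: I would combine the identity~(\ref{eq17}) with the observation that, when $\mathrm{mult}_{S_u}(j)=2$ for a color $j\in T_{i_1}$, the two occurrences of $j$ in $S_u$ are exactly pinned by the two paths of (1), so no further occurrence is available; (6.1) then records that the color $i_1$ and every $\beta\in V_{uv}$ must itself appear in $S_u$ to account for the remaining degree, and (6.2) uses Lemma~\ref{lemma06} once more to forbid any $(j,i)_{(u_{i_4},u_i)}$-path and hence to force $\mathrm{mult}_{S_u}(i_4)\ge 2$. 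I expect (6) and (7) to be the main obstacle: unlike (1)--(3), which transcribe Lemma~\ref{auvge2} almost verbatim, they require exact bookkeeping of where each $j\in T_{uv}$ sits across the multisets $S_u$ and $S_v$ and a careful, repeated appeal to Lemma~\ref{lemma06} to exclude the extra bichromatic paths that would otherwise inflate the multiplicities.
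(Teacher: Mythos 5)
Your treatment of (1)--(5) is essentially the paper's own argument: (1) is read off from Lemma~\ref{auvge2}(1.1)--(2) with the range of $a_{i_1}$ determined by $d(u)$; (2)--(3) are proved exactly as in Corollary~\ref{auv4}, by noting that $j\in T_{i_1}\cap T_{i_3}$ would force $a_{i_1}=a_{i_3}=i_4$ and hence two $(i_4,j)$-paths terminating at $u_{i_4}$ through its unique $j$-coloured edge, which is impossible; and (4)--(5) are direct specializations of Lemma~\ref{auvge2}(3)--(4). All of that is fine.

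The gap is in your proposed proof of (6.1). You want to deduce $i_1\in S_u$ and $V_{uv}\subseteq S_u$ from Eq.~(\ref{eq17}) "to account for the remaining degree", but there is no counting argument available here: Eq.~(\ref{eq17}) is only a lower bound on $\sum_i d(u_i)$, and at the level of this corollary no upper bound on the degrees of the $u_i$ has been imposed (those upper bounds come from the configurations ($A_6$)--($A_8$) and are invoked only in the later lemmas that apply this corollary). Nothing about degrees forces the specific colours $i_1$ or $\beta\in V_{uv}$ to occur in $S_u$. The correct mechanism is a recoloring, and it is exactly where your (accurate) observation that the two occurrences of $j$ are pinned at $C(u_{i_2})$ and $C(u_{a_{i_1}})$ pays off: if $\alpha\in(\{i_1\}\cup V_{uv})\setminus S_u$, then $(u u_{i_3}, u u_{i_1})\to(\alpha, j)$ is a valid recoloring --- giving $u u_{i_3}$ the colour $\alpha$ is safe precisely because $\alpha$ appears on no edge at any $u_i$, and giving $u u_{i_1}$ the colour $j$ is then safe because the only obstructing $(a_{i_1},j)$-path ran through $u_{a_{i_1}}=u_{i_3}$ and has just been destroyed, while Lemma~\ref{lemma06} rules out an $(i_2,j)$-cycle since the maximal $(i_2,j)$-path at $u_{i_2}$ already ends at $v_{i_2}$. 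The result is an acyclic $(\Delta+5)$-coloring with $C(u)\cap C(v)=\{i_2,\alpha\}$ or $\{i_2,i_1\}$ of size $2$, contradicting the standing assumption; hence $\{i_1\}\cup V_{uv}\subseteq S_u$. Your sketch of (6.2) is salvageable once this is in place (the pinning of $j$ shows $u u_{i_4}\to j$ creates no bichromatic cycle, and then the argument of Lemma~\ref{auvge2}(2)--(3) applied to the recoloured $H$ yields $\mbox{mult}_{S_u}(i_4)\ge 2$), and (7) is then immediate from (6.1) together with $\mbox{mult}_{S_u}(j)\ge 2$ from Lemma~\ref{auvge2}(2).
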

\begin{proof}
It follows from Lemma~\ref{auvge2}(1.1) that (1) holds.

One sees clearly that if $d(u) = 4$, then (2) holds naturally.
Hence, assume $d(u) = 5$, $j\in T_{i_1}$ and w.l.o.g. that $H$ contains a $(i_2, j)_{(u_{i_2}, v_{i_2})}$-path.
When $j\in T_{i_3}$, it follows from Lemma~\ref{auvge2}(1.1) that $H$ contains a $(i_4, j)_{(u_{i_1}, u_{i_4})}$-path and a $(i_4, j)_{(u_{i_3}, u_{i_4})}$-path.
Hence, $j\in C(u_{i_3})$ and (2--3) hold.

Together with Lemma~\ref{auvge2}(3--4), (4--5) hold.

To prove (6), assume mult$_{S_u}(j) = 2$.
Assuming w.l.o.g. $j\in T_{i_1}$,
and it follows from (1) that $H$ contains a $(i_2, j)_{(u_{i_2}, v_{i_2})}$-path and a $(i_3, j)_{(u_{i_3}, u_{i_1})}$-path.
When $i_1\not\in S_u$, $(u u_{i_3}, u u_{i_1})\to (i_1, j)$ reduces to an acyclic edge $(\Delta + 5)$-coloring for $H$ such that $C(u)\cap C(v) = \{i_2, i_1\}$.
When there exists an $i\in V_{u v}\setminus S_(u)$, $(u u_{i_3}, u u_{i_1})\to (i, j)$ reduces to an acyclic edge $(\Delta + 5)$-coloring for $H$ such that $C(u)\cap C(v) = \{i_2, i\}$.
In the other case, $i_1\in S_u$ and $V_{u v}\subseteq S_u$.
Since recoloring $u u_{i_4}\to j$ reduces to another acyclic edge $(\Delta + 5)$-coloring for $H$ such that $A_{u v} = \{i_1, i_2, i_3\}$,
by the similar discussion to show Lemma~\ref{auvge2}(2), $\beta\in C(u_j)$ for each $j\in A_{u v}\setminus \{i\}$.
Hence, (6) holds and then (7) holds.
\end{proof}

\begin{lemma}
\label{4to321}
If $a_{u v} = 4$, then in $O(1)$ time either an acyclic edge $(\Delta + 5)$-coloring for $G$ can be obtained,
or an acyclic edge $(\Delta + 5)$-coloring for $H$ can be obtained such that $|C(u)\cap C(v)| \le 3$.
\end{lemma}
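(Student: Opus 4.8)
The plan is to show that, short of producing an acyclic edge $(\Delta+5)$-coloring of $G$ outright, the inability to reduce $a_{uv}$ below $4$ forces a degree sum at $u$ that violates the defining bound of the host configuration. First I would pin down the setting: since $|C(u)| = d(u)-1 \le 4$ and $a_{uv} = |A_{uv}| = 4$, necessarily $d(u) = 5$, $A_{uv} = C(u) = [1,4]$, and $U_{uv} = \emptyset$. Hence $G$ contains $(A_7)$ (if $d(v)=5$) or $(A_8)$ (if $d(v)=6$), and in both $u_3, u_4$ are the common neighbours of $u$ and $v$ on the triangles $uvu_3$, $uvu_4$. A direct count gives $t_{uv} = |C| - |C(u)\cup C(v)| = \Delta - d(v) + 6 \ge 6$.

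Next I would assume neither disjunct holds. In particular $uv$ cannot be colored directly, so \ref{prop3001} applies and $T_{uv} \subseteq \bigcup_i B_i$; and the coloring of $H$ cannot be altered to reach $|C(u)\cap C(v)| \le 3$. Then Corollary~\ref{auv4} tells us the sets $T_i = T_{uv}\setminus C(u_i)$ are pairwise disjoint, so each color of $T_{uv}$ is absent from at most one $C(u_i)$ and therefore lies in at least three of them; equivalently $\mathrm{mult}_{S_u}(j) \ge 3$ for every $j\in T_{uv}$. Feeding this into the identity Eq.~(\ref{eq17}) and using $w_i \ge 1$ together with $w_3, w_4 \ge 2$ — since each of $u_3, u_4$ carries both $c(uu_i)=i$ and $c(vu_i)\in C(v)$, two distinct colors of $C(u)\cup C(v)$ — I obtain $\sum_{i=1}^4 d(u_i) \ge 6 + 3 t_{uv} = 3\Delta - 3 d(v) + 24$, that is $\ge 3\Delta + 9$ in case $(A_7)$ and $\ge 3\Delta + 6$ in case $(A_8)$.

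I would then confront this with the configuration upper bounds in Eq.~(\ref{eq16}) and its $(A_8)$ analogue. For $(A_7)$ the value $3\Delta + 9$ strictly exceeds each of $\Delta + 20$, $2\Delta + 13$, $2\Delta + 12$ for every admissible $\Delta$, a contradiction; and $(A_8)$ is likewise contradictory whenever $\Delta \ge 8$. The delicate point, which I expect to be the main obstacle, is the tight case $(A_8)$ with $\Delta \in \{6,7\}$, where $3\Delta + 6$ only \emph{equals} the upper bound. Here equality forces all slack to vanish: $\mathrm{mult}_{S_u}(j) = 3$ for every $j$, and $w_1 = w_2 = 1$, $w_3 = w_4 = 2$. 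The escape is that $w_1 = 1$ means $C(u_1) \cap (C(u)\cup C(v)) = \{1\}$, whence $C_1 = C(u_1)\cap C(u) = \{1\}$; and since $|T_1| = \Delta - d(u_1) + 1 \ge 1$ (as $d(u_1)\le \Delta$), we have $T_1 \ne \emptyset$, so the non-reduction alternative of Lemma~\ref{auvge2}(1.1) would demand a color $a_1 \in C_1\setminus\{1, i_j\} = \emptyset$ — impossible. This final contradiction shows that in the tight case the coloring can after all be modified to reach $|C(u)\cap C(v)| \le 3$, completing the disjunction.

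Overall the argument reduces to a finite check once the counting inequality and the observation $w_3, w_4 \ge 2$ are in place. I anticipate that the only genuinely subtle verification is the equality analysis for $(A_8)$ at $\Delta \in \{6,7\}$, where one must exploit the collapse $C_1 = \{1\}$ to contradict the recoloring obstruction that Lemma~\ref{auvge2} attaches to the non-reduction case; everything else is bookkeeping with the degree-sum bounds.
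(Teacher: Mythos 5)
Your proposal is correct, and it rests on the same two pillars as the paper's proof — the pairwise disjointness of the $T_i$ from Corollary~\ref{auv4} and a degree count against the configuration bounds — but the accounting is organized differently, and the difference matters. The paper first squeezes extra forced colors out of Lemma~\ref{auvge2}(1.1)--(1.2): the colors $a_i$ and $b_{i_1}$ give $w_1, w_2 \ge 2$, $w_3 \ge 4$, $w_4 \ge 3$, hence $t_1, t_2 \ge 2$, $t_3 \ge 4$, $t_4 \ge 3$, and then the containment $\bigcup_{j\ne i} T_j \subseteq C(u_i)$ yields the \emph{per-vertex} bound $d(u_i) \ge 11$ for every $i$. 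That contradicts each sub-case of ($A_7$)/($A_8$) outright, since each one posits a neighbour of degree at most $8$, so no boundary analysis is needed. You instead keep only the trivial bounds $w_1, w_2 \ge 1$, $w_3, w_4 \ge 2$ but exploit $\mathrm{mult}_{S_u}(j) \ge 3$ in the aggregate identity of Eq.~(\ref{eq17}), getting $\sum_i d(u_i) \ge 6 + 3t_{uv}$; this beats the ($A_7$) bounds and the ($A_8$) bound for $\Delta \ge 8$, but only \emph{meets} the ($A_8$) bounds $2\Delta+12$ and $2\Delta+13$ at $\Delta = 6, 7$. Your resolution of the tight case is sound: equality forces $w_1 = 1$, hence $C_1 = \{1\}$, while $t_1 = \Delta - d(u_1) + 1 \ge 1$ guarantees $T_1 \ne \emptyset$, so the non-reduction alternative of Lemma~\ref{auvge2}(1.1) demands an $a_1 \in C_1 \setminus \{1, i_j\} = \emptyset$ and therefore fails, which delivers the second disjunct of the lemma. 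What the paper's route buys is the absence of this equality analysis; what yours buys is that the only input beyond disjointness is the single inequality $\mathrm{mult}_{S_u}(j)\ge 3$, with no need to track the auxiliary colors $a_i, b_{i_1}$ or to invoke Lemma~\ref{lemma06} to separate them.
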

\begin{proof}
It suffices to assume $d(u) = 5$ and $c(u u_i) = i$, $i\in [1, 4]$.

Since $a_{u v} = 4$ and $|C(v)\setminus \{c(v u_3), c(v u_4)\}\le 3$,
assume w.l.o.g. $c(v u_3) = i_1\in \{1, 2, 4\}$.
Since $d(v)\in [5, 6]$, $t_{u v}\ge \Delta + 5 - 5 = \Delta$.
One sees clearly that $T_i\ne \emptyset$, $i\in [1, 4]$, and $R_{i_1}\ne \emptyset$.

It follows from Lemma~\ref{auvge2}(1.1) that for each $i\in [1, 4]$ and $j\in T_i$,
there exists a $a_i\in C_i\setminus\{i\}$ such that $H$ contains a $(a_i, j)_{(u_i, u_{a_i})}$-path,
and from Lemma~\ref{auvge2}(1.2) that for each $j\in R_{i_1} = T_3$, there exists a $b_{i_1}\in C(u_3)\setminus\{i_1\}$ such that $H$ contains a $(b_i, j)_{(u_3, v_{b_i})}$-path.
Thus, $a_3\ne b_{i_1}$ by Lemma~\ref{lemma06}.

Since $|T_{u v}|\ge \Delta$ and $1, a_1\in C(u_1)$, we have $t_1\ge 2$.
Likewise, $t_2\ge 2$, $t_3\ge 4$, and $t_4\ge 3$.

It follows from Corollary~\ref{auv4} that for each $i\in [1, 4]$,
$\bigcup_{j\in A_{u v}\setminus \{i\}} T_j\subseteq C(u_i)$,
and for each $i, j, l\in A_{u v}$, $|T_i\cup T_j\cup T_l| = |T_i| + |T_j| + |T_l| = t_i + t_j + t_l$.

Then, $d(u_1) \ge |\{1, a_1\}| + |T_2\cup T_3\cup T_4| = 2 + t_2 + t_3 + t_4\ge 2 + 2 + 4 + 3 = 11$,
$d(u_2) \ge |\{1, a_2\}| + |T_3\cup T_4\cup T_1| = 2 + t_3 + t_4 + t_1\ge 2 + 4 + 3 + 2 = 11$,
$d(u_3) \ge |\{3, i_1, a_3, b_{i_1}\}| + |T_4\cup T_1\cup T_2| = 4 + t_4 + t_1 + t_2 \ge 4 + 3 + 2 + 2= 11$,
$d(u_4) \ge |\{4, i_2, a_4\}| +|T_1\cup T_2\cup T_3| = 3 + t_1 + t_2 + t_3\ge 3 + 2 + 2 + 4 = 11$,
and thus none of ($A_7$) and ($A_8$) holds.
\end{proof}
\begin{lemma}
\label{3to21}
If $a_{u v} = 3$, then in $O(1)$ time either an acyclic edge $(\Delta + 5)$-coloring for $G$ can be obtained,
or an acyclic edge $(\Delta + 5)$-coloring for $H$ can be obtained such that $|C(u)\cap C(v)| \le 2$.
\end{lemma}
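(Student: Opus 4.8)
The plan is to follow the template of Lemma~\ref{4to321}, now with Corollary~\ref{auv3} playing the role that Corollary~\ref{auv4} played there and with Lemma~\ref{auvge2} supplying the reductions. First I would apply these: if any of the recolorings they describe succeeds, we obtain an acyclic edge $(\Delta+5)$-coloring for $H$ with $|C(u)\cap C(v)|\le 2$, or for $G$ directly, and the lemma follows. So I may assume that none of those reductions applies, i.e. that the full structural conclusion of Corollary~\ref{auv3} holds, and then aim to contradict the degree hypotheses of the surviving configurations $(A_6)$--$(A_8)$. I record that $t_{uv}=\Delta+5-|C_{uv}|$ is large: since $a_{uv}=3$ we have $|C_{uv}|=d(v)$ for $d(u)=5$ (so $t_{uv}=\Delta+5-d(v)$), while $A_{uv}=C(u)$ forces $|C_{uv}|=d(v)-1$ for $d(u)=4$ (so $t_{uv}=\Delta+6-d(v)$).

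The engine is a degree count on the common-color neighbors $u_{i_1},u_{i_2},u_{i_3}$. By Corollary~\ref{auv3}(1) each nonempty $T_{i}$ forces a second color of $C(u)$ into $C(u_i)$, so $C(u_i)\supseteq\{i,a_i\}\cup(T_{uv}\setminus T_i)$; with $d(u_i)\le\Delta$ this both shows $t_i\ge t_{uv}-\Delta+2\ge 2$ (hence every $T_i$ is nonempty and the second color is legitimate) and, after summing and using Corollary~\ref{auv3}(2)--(3) to place each $T_j$ inside the other neighbors, feeds Eq.~(\ref{eq17}). For $d(u)=4$ this already gives $\sum_{x\in N(u)}d(x)\ge 2\Delta+18-d(v)$, which strictly exceeds the cap of $(A_{6.1})$ when $d(v)=4$; the cases $d(v)\in\{5,6\}$ only meet the caps and so, exactly as in the $d(u)=5$ analysis below, must be closed using the multiplicity surplus recorded by the $T_0$-term of Eq.~(\ref{eq17}), while $(A_{6.4})$ is dispatched using the forced high degrees of Lemma~\ref{lemma02}.

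For $d(u)=5$ the same three neighbors are controlled, but there is a fourth neighbor $u_{i_4}$ with $i_4\in U_{uv}\not\subseteq C(v)$, and Corollary~\ref{auv3}(2)--(3) range only over $A_{uv}$, so the colors of $T_{i_4}$ are not automatically absorbed by the common-color neighbors and the bare count (roughly $\sum_{i=1}^4 d(u_i)\ge 7+2t_{uv}$) leaves too large a gap to beat the caps of Eq.~(\ref{eq16}) and its $(A_8)$-analogue. Here I would invoke Corollary~\ref{auv3}(4)--(7): whenever a recoloring $u u_{i_4}\to j$ with $j\in T_{i_4}$, or $v v_\beta\to j$ with $\beta\in V_{uv}$, avoids a bichromatic cycle it reduces $a_{uv}$ and we are done; otherwise $i_4$ and every color of $V_{uv}$ reappear with multiplicity at least $2$ in $S_u$ and are spread over $u_{i_1},u_{i_2},u_{i_3}$. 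This is precisely the extra mass the $T_0$-surplus term of Eq.~(\ref{eq17}) can harvest, pushing the degree lower bound strictly past the cap for every value of $\Delta$ treated in Eq.~(\ref{eq16}).

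The hard part will be this $d(u)=5$ analysis, and more narrowly its tight and near-tight cases, because the asymmetry of $u_{i_4}$ destroys the clean four-fold symmetry that made Lemma~\ref{4to321} short. Turning the qualitative conclusions of Corollary~\ref{auv3}(4)--(7) into a usable multiplicity surplus will require tracking, for each $j\in T_{uv}$, whether the path witnessing $j\in B_{i_j}$ runs through $u_{i_4}$ or through a $V_{uv}$-neighbor, and I expect a handful of recoloring sub-cases here — together with the $d(v)\in\{5,6\}$ boundary cases of $(A_{6.1})$ and $(A_{6.3})$ on the $d(u)=4$ side — before the counting closes. Once these are settled, the final step is the routine comparison of the accumulated degree lower bound against Eq.~(\ref{eq16}) and the degree-sum caps of $(A_6)$--$(A_8)$.
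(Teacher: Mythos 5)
Your overall strategy --- reduce via Lemma~\ref{auvge2} and Corollary~\ref{auv3}, then count degrees of the common-colour neighbours against the caps of $(A_6)$--$(A_8)$ --- is exactly the paper's, and your identification of $d(u)=5$ as the hard side is right. But there is a genuine gap in how you expect the counting to close. First, for $d(u)=4$ the mechanism you name for the boundary cases $d(v)\in\{5,6\}$, namely ``the multiplicity surplus recorded by the $T_0$-term of Eq.~(\ref{eq17})'', is the wrong one: when every colour of $T_{uv}$ has multiplicity exactly $2$ in $S_u$, the set $T_0$ is empty and that term contributes nothing. The surplus the paper actually harvests comes from Corollary~\ref{auv3}(6.1)/(7): if some $j\in T_{uv}$ has multiplicity only $2$, then $V_{uv}\subseteq S_u$ (together with $i_1\in S_u$), which adds $|V_{uv}|=d(v)-4$ to the count and is what lifts your $2\Delta+18-d(v)$ up to the uniform $2\Delta+14$. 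Second, and more seriously, even the sharpened count does not ``strictly exceed'' the caps: against $(A_{6.3})$ it lands exactly on $2\Delta+14$, and the paper must then exploit the triangle structure ($vu_2,vu_3\in E(G)$ forces $c(vu_2),c(vu_3)\notin A_{uv}$, which pins down $d(v)=6$ and $C(u_2)$) to perform a final recoloring $vu_2\to T_2$, $uv\to 4$. Nothing in your plan produces this step, and no amount of multiplicity bookkeeping will, because the extremal configuration genuinely survives at the counting level.

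The same phenomenon dominates the $d(u)=5$, $d(v)=6$ side: in the paper's Cases (2.2.1)--(2.2.2) the degree bounds repeatedly only meet the relevant caps ($\Delta+7$, $2\Delta+13$, etc.), and each equality case is killed by a bespoke recoloring --- often a chain of two or three colour shifts guided by which $(i,j)$-paths are present --- rather than by counting. Describing this as ``a handful of recoloring sub-cases'' before ``the counting closes'' underestimates the work by an order of magnitude; this portion is the bulk of the lemma's proof, and as written your plan has no method for generating those recolorings in the equality cases. So the skeleton matches the paper's, but the proof is not yet there.
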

\begin{proof}
We discuss the following two cases for $d(u) = 4$ and $d(u) = 5$, respectively.

{\bf Case 1.} $d(u) = 4$ and $c(u u_i) = i$, $i\in [1, 3] = A_{uv}$.

In this case, assume $c(v v_i) = i$, $i\in [1, d(v) - 1]$,
and $t_{u v} = \Delta + 5 - (d(v) - 1) = \Delta + 6 - d(v)$.

Assume that $v u_2\in E(G)$.
Since $t_{u v}\ge \Delta - 1$ and $2, c(v u_2)\in C(u_2)$,
we have $T_2 \ne \emptyset$.
It follows from Corollary~\ref{auv3}(1.1) that for each $j\in T_2$,
there exists a $a_2\in \{1, 3\}$ such that $H$ contains a $(a_i, j)_{(u_i, u_{a_i})}$-path.
When $c(v u_2) = \alpha\in \{1, 3\}$, let $\beta = \{1, 3\}\setminus\{\alpha\}$,
thus it follows from Corollary~\ref{auv3}(1.1) that for each $j\in T_2$,
$H$ contains a $(\beta, j)_{(u_2, u_\beta)}$-path and a $(\beta, j)_{(u_\beta, v_\beta)}$-path,
a contradiction.
In the other case, if $v u_2\in E(G)$, then $T_2 \ne \emptyset$, for each $j\in T_2$,
there exists a $a_2\in \{1, 3\}$ such that $H$ contains a $(a_i, j)_{(u_i, u_{a_i})}$-path and $c(v u_2)\not\in A_{u v}$.

{Case 1.1.} $d(v)\le 6$.
Then $t_{u v}\ge \Delta$.
It follows that for each $i\in [1, 3]$, $T_i\ne\emptyset$,
and from Corollary~\ref{auv3}(1.1) that for each $i\in [1, 3]$ and $j\in T_i$,
there exists a $a_i\in C_i\setminus \{i\}$ such that $H$ contains a $(a_i, j)_{(u_i, u_{a_i})}$-path.
Further, by Corollary~\ref{auv3}(7), mult$_{S_u}(j)\ge 3$ for each $j\in T_{u v}$,
or if there exists a $j\in T_{u v}$ such that mult$_{S_u}(j)= 2$, then $V_{u v}\subseteq S_u$.
One sees that $t_{u v}  = \Delta + 6 - d(v)$, $|V_{u v}| = d(v) - 1 - 3 = d(v) - 4$,
and $t_{u v} - |V_{u v}| = \Delta + 10 - 2d(v)\ge 4 + 10 -2\times 6 = 2 > 0$.
Hence,
\begin{equation*}
\begin{aligned}
\sum_{x\in N(u)}d(x) &\ge 2\times t_{u v} + \sum_{i\in [1, 3]}|\{1, a_i\}| + |V_{u v}| + d(v)\\
                     & = 2\times (\Delta + 6 - d(v)) + 2\times 3 + d(v) - 4 + d(v) = 2\Delta + 14.
\end{aligned}
\end{equation*}
and thus ($A_{6.3}$) holds, i.e., $u v$ is contained in two $3$-cycles $u v u_2$, $u v u_3$ and $\Sigma_{x\in N(u)}d(x) = 2\Delta + 14$.

Since $v u_2, v u_3\in E(G)$, $c(v u_2), c(v u_2)\not\in [1, 3]$,
i.e., $1, 2, 3\in C(v)\setminus \{c(v u_2), c(v u_3)\}$.
It follows that $d(v)= 6$ and assume $(v u_2, v u_3)_c = (4, 5)$.
Further, we have $C_{uv}\cap C(u_1) = \{1, a_1\}$, $a_1\in \{2, 3\}$;
$C_{u v}\cap C(u_2) = \{2, 4, a_2\}$, $a_2\in \{1, 3\}$;
and $C_{u v}\cap C(u_3) = \{3, 5, a_3\}$, $a_3\in \{1, 2\}$.

Let $\gamma = \{1, 3\}\setminus \{a_2\}$.
One sees clearly that $4\not\in S_u\setminus C(u_2)$,
$5\not\in C(u_2)$,
and for each $j\in T_2$,
$H$ contains a $(a_2, j)_{(u_2, u_{a_2})}$-path and a $(\gamma, j)_{u_\gamma, v_\gamma}$-path.
Thus, $v u_2\to T_2$ and $uv\overset{\divideontimes}\to 4$.

{Case 1.2.} $d(v) = 7$ and ($A_{6.4}$) holds.
Since $v u_2, v u_3\in E(G)$, $1, 2, 3\in C(v)\setminus \{c(v u_2), c(v u_3)\}$.
Assume $(v u_2, v u_3)_c = (4, 5)$.
Further, we have $a_2\in C(u_2)$, $a_2\in \{1, 3\}$;
and $a_3\in C(u_3)$, $a_3\in \{1, 2\}$.

When $4\not\in C(u_1)$ and $H$ contains neither a $(2, 4)_{(u_1, u_2)}$-path nor a $(3, 4)_{(u_1, u_2)}$-path,
$u u_1\to 4$ reduces to the case where $c(v u_2)\in A_{u v}$.
In the other case, $4\in C(u_1)$, or $G$ contains a $(4, i)_{(u_1, u_i)}$-path for some $i\in \{2, 3\}$.
i.e., $[2, 4]\cap C(u_1) \ne \emptyset$.
It follows that $T_1\ne \emptyset$ and for each $j\in T_1$,
there exists a $a_1\in \{2, 3\}$ such that $H$ contains a $(a_1, j)_{(u_1, u_{a_1})}$-path.

Since $|\{1, a_1\}| + |\{2, 4, a_2\}| + |\{3, 5, a_3\}|+ 2t_{u v} = 2(\Delta - 1) + 8 = 2\Delta + 6\le \sum_{i\in [1, 3]}d(u_i)\le 2\Delta + 8$, there exists a $j\in T_{u v}$ such that mult$_{S_u}(j)= 2$.
It follows from Corollary~\ref{auv3}(4) that $6\in S_u$.

Since $|\{1, a_1\}| + |\{2, 4, a_2\}| + |\{3, 5, a_3\}|+ 2t_{u v} + |\{6\}|= 2\Delta + 7\le \sum_{i\in [1, 3]}d(u_i)\le 2\Delta + 8$, mult$_{S_u}(i)\le 2$ for each $i\in [4, 6]$,
and there exists $i, j\in [4, 6]$ such that mult$_{S_u}(i) = $mult$_{S_u}(j) = 1$.

Assuming w.l.o.g. $5\not\in C(u_1)\cup C(u_2))$.
It follows that $3\in C(u_1)$.
When $4/6\not\in C(u_3)$, $v u_3\to T_3$ and $uv\overset{\divideontimes}\to 5$.
When $6\not\in C(u_2)$ and $4\not\in C(u_1)\cup C(u_3)$, $v u_2\to T_2$ and $u v\overset{\divideontimes}\to 4$.
In the other case, $4/6\in C(u_3)$, and if $4\not\in C(u_1)\cup C(u_3)$, then $6\in C(u_2)\cap C(u_3)$.

It follows that mult$_{S_u}(5) = 1$, mult$_{S_u}(4) + $mult$_{S_u}(6) = 3$,
and for any $j\in T_{uv}$, mult$_{S_u}(j) = 2$.
Hence, $1, 2, 3\in S_u$, $C_1 = \{1, 3\}$, $C_2 = \{2, 1\}$, $C_3 = \{3, 2\}$,
for each $i\in [1, 3]$, mult$_{S_u}(i) = 1$,
and $c(u_1 u_2), c(u_1 u_3)\in T_{uv}$.
Assuming w.l.o.g. $(u_1 u_2, u_1 u_3)_c = (7, 8)$.
Since mult$_{S_u}(7) = 2$, $7\not\in C(u_3)$.
Since $2\not\in C(u_1)$, $H$ contains a $(1, 7)_{(u_1, v}$-path,
and thus there is no $(2, 7)_{(u_2, u_3)}$-path, a contradiction.

{\bf Case 2.} $d(u) = 5$ and $c(uu_i) = i$, $i\in [1, 4]$.

In this case, assume $c(v v_i) = i$, $i\in A_{u v}\cup [5, d(v)] = \{i_1, i_2, i_3\}\cup [5, d(v)]$,
and $t_{u v} = \Delta + 5 - d(v)\ge \Delta - 1$.
One sees that $c(u u_3), c(v u_3)\in C(u_3)$ and $c(u u_4), c(v u_4)\in C(u_4)$.
Thus, $T_3\ne\emptyset$, $T_4\ne\emptyset$,
and it follows that for each $i\in [3, 4]$, if $i\in A_{u v}$,
then for each $j\in T_i$, there exists a $a_i\in C(u_i)\setminus \{c(v u_i)\}$ such that $H$ contains a $(a_i, j)_{(u_i, u_{a_i})}$-path.
By Corollary~\ref{auv3}(2), for each $i\in A_{u v}$,
$\bigcup_{j\in A_{u v}\setminus \{i\}} T_j\subseteq C(u_i)$,
and from Corollary~\ref{auv3}(3) that for each $i, j\in A_{u v}$,
$T_i\cap T_j = \emptyset$,
and then $|T_i\cup T_j| = |T_i| + |T_j| = t_i + t_j$.

By Corollary~\ref{auv3}(1) and Lemma~\ref{lemma06}, we proceed with the following proposition:
\begin{proposition}
\label{prop32010}
For each $i\in [3, 4]$,
if $c(u u_i)\in A_{u v}$ and $c(v u_i) = i_1\in A_{u v}$,
let $A_{u v}\setminus \{i, i_1\} = \{i_2\}$, $U_{u v} = \{a_i\}$,
then for each $j\in T_i$,
$H$ contains a $(i_2, j)_{(u_{i_2}, v_{i_2})}$-path, a $(a_i, j)_{(u_i, u_{a_i})}$-path,
and there exists a $b_{i_1}\in (C(u_i)\cap V_{u v})\setminus \{i, i_1, a_i\}$ such that $H$ contains a $(b_{i_1}, j)_{(u_i, v_{b_{i_1}})}$-path. \hfill $\Box$
\end{proposition}

{Case 2.1.} $d(v) = 5$.
Then $C(v) = A_{u v}\cup \{5\}$, $T_{u v} = [6, \Delta + 5]$ and $t_{u v} = \Delta$.

It follows that $T_i\ne\emptyset$ for each $i\in A_{u v}$,
and from Corollary~\ref{auv3}(1.2) that for each $i\in A_{u v}$, $j\in T_i$,
there exists a $a_i\in C_i\setminus \{i\}$ such that $H$ contains a $(a_i, j)_{(u_i, u_{a_i})}$-path.
By considering the following three scenarios, none of ($A_7$) and ($A_8$) holds.
\begin{itemize}
\parskip=0pt
\item
    $A_{u v} = [2, 4]$ and $c(v u_3) = i_1\in \{2, 4\}$.
    It follows that there exists $a_i$, $i\in [2, 4]$ and $b_{i_1}$ such that $a_2\in C(u_2)$,
    $a_4\in C(u_4)\setminus \{4, c(v u_4)\}$, $a_3, b_{i_1}\in C(u_3)\setminus \{3, i_1\}$.
    Then $t_2\ge 2$, $t_3\ge 4$ and $t_4\ge 3$.
    Since $T_2\cup T_4\subseteq C(u_3)$ and $T_2\cup T_3\subseteq C(u_4)$,
    thus, $d(u_3)\ge |\{3, i_1, a_3, b_{i_1}\}| + |T_2\cup T_4| = 4 + |T_2| + |T_4|\ge 4 + 2 + 3 = 9$,
    and $d(u_4)\ge |\{4, c(v u_4), a_4\}| + |T_2\cup T_3| = 3 + |T_2| + |T_3|\ge 3 + 2 + 4 = 9$.
\item
    $A_{uv} = [1, 3]$ and $c(v u_3) \in [1, 2]$, say $c(v u_3) = 1$.
    It follows that there exists $a_i$, $i\in [1, 3]$ and $b_1$ such that $a_1\in C(u_1)$, $a_2\in C(u_2)$,
    $4, 5\in C(u_3)\setminus \{3, 1\}$,
    and $G$ contains a $(4, j)_{(u_3, u_4)}$-path for each $j\in T_3$,
    which implies that $T_3\subseteq C(u_4)\setminus \{c(v u_4)\}$.
    Then $t_1\ge 2$, $t_2\ge 2$, and $t_3\ge 4$.
    Since $T_2\cup T_3\subseteq C(u_1)$, $T_3\cup T_1\subseteq C(u_2)$ and $T_1\cup T_2\subseteq C(u_3)$,
    thus, $d(u_1)\ge |\{1, a_1\}| + |T_2\cup T_3| = 2 + |T_2| + |T_3|\ge 2 + 2 + 4 = 8$,
    $d(u_2)\ge |\{2, a_2\}| + |T_1\cup T_3| = 2 + |T_1| + |T_3|\ge 2 + 2 + 4 = 8$,
    $d(u_3)\ge |\{3, 1, 4, 5\}| + |T_1\cup T_2| = 4 + |T_1| + |T_2|\ge 4 + 2 + 2 = 8$,
    and $d(u_4)\ge |\{4, c(vu_4)\}| + |T_3| = 2 + |T_3|\ge 2 + 4 = 6$.
\item
    $A_{u v} = [1, 3]$ and $(v u_3, v u_4)_c = (5, i_1)$, where $\{i_1, i_2, i_3\} = [1, 3]$.
    It follows that there exists $a_i$, $i\in [1, 3]$ and $b_{i_1}$ such that $a_1\in C(u_1)$, $a_2\in C(u_2)$,
    $a_3\in C(u_3)\setminus \{c(v u_3)\}$, $b_{i_1}\in C(u_4)\setminus \{4, i_1\}$.
    Then $t_1\ge 2$, $t_2\ge 2$, $t_3\ge 3$, and likewise, $r_{i_2}\ge 2$, $r_{i_3}\ge 2$.
    Since $T_2\cup T_3\subseteq C(u_1)$, $T_3\cup T_1\subseteq C(u_2)$, $T_1\cup T_2\subseteq C(u_3)$,
    and $R_{i_2}\cup R_{i_3}\subseteq C(u_4)$ with $R_{i_2}\cap R_{i_3} = \emptyset$,
    thus, $d(u_1)\ge |\{1, a_1\}| + |T_2\cup T_3| = 2 + |T_2| + |T_3|\ge 2 + 2 + 3 = 7$,
    $d(u_2)\ge |\{2, a_2\}| + |T_1\cup T_3| = 2 + |T_1| + |T_3|\ge 2 + 2 + 3 = 7$,
    $d(u_3)\ge |\{3, c(vu_3), a_3\}| + |T_1\cup T_2| = 3 + |T_1| + |T_2|\ge 3 + 2 + 2 = 7$,
    and $d(u_4)\ge |\{4, i_1, b_{i_1}\}| + |R_{i_2}\cup R_{i_3}| = 3 + |R_{i_2}| + |R_{i_3}|\ge 3 + 2 + 2 = 7$.
\end{itemize}

{Case 2.2.} $d(v) = 6$.
Then $C(v) = \{i_1, i_2, i_3\}\cup \{5, 6\}$, $T_{u v} = [7, \Delta + 5]$ and $t_{u v} = \Delta - 1\ge 5$.

We consider the following two subcases due to the value of $|\{c(u u_3), c(u u_4)\}\cap A_{u v}|$.

(2.2.1.) $|\{c(u u_3), c(u u_4)\}\cap A_{u v}| = 2$, assuming w.l.o.g. $A_{u v} = [2, 4]$.
It follows from \ref{prop32010} that for each $i\in [3, 4]$ and for each $j\in T_i$,
there exists a $a_i \in C_i\setminus \{i\}$ such that $H$ contains a $(a_i, j)_{(u_i, u_{a_i})}$-path,
and if $c(v u_i) = i_1\in A_{u_i}$, then $a_i = 1$ and $b_{i_1}\in \{5, 6\}$.

Since $T_3\subseteq C(u_4)$, we have $T_{u v}\subseteq C(u_3)\cup C(u_4)$.
One sees that if $|C(u_3)\cup C_{u v}|\ge 4$ and $|C(u_4)\cup C_{u v}|\ge 4$,
then $d(u_2) = d(u_3) = d(u_4) = \Delta = 7$ and $C(u_2) = \{2\}\cup T_3\cup T_4 = \{2\}\cup [7, \Delta + 5]$.

We consider the following three scenarios due to the value of $|\{c(v u_3), c(v u_4)\}\cap A_{u v}|$.
\begin{itemize}
\parskip=0pt
\item
     $|\{c(v u_3), c(v u_4)\}\cap A_{u v}| = 2$.
     Assume w.l.o.g. $(v u_3, v u_4)_c = (2, 3)$.
     It follows that $1, 5/6\in C(u_3)$, $1, 5/6\in C(u_4)$ and $T_3\cup T_4 = T_{u v}\subseteq C(u_1)$.
     Then $\Delta = 7$ and for each $i\in [1, 4]$, $d(u_i)= 7$.
     Thus, none of ($A_7$) and ($A_8$) holds.
\item
     $|\{c(v u_3), c(v u_4)\}\cap A_{u v}|\le 1$.
     Assume w.l.o.g. $c(v u_4) = 6$.
     It follows that $a_3\in C(u_3)\setminus \{c(v u_3)\}$ and $a_4\in C(u_4)\setminus \{c(v u_4)\}$.
     When $W_i = \{i, c(v u_i), a_i\}$ for some $i\in \{3, 4\}$,
     assuming w.l.o.g. $W_4 = \{4, 6, a_4\}$,
     it follows from \ref{prop32011}(1) that $6\in C(u_3)\cap C(u_2)$ and $\Delta\in [7, 8]$.
     Then $t_3\ge 3$ and $t_4\ge 2$.
     Since $T_3\cup T_4\subseteq C(u_2)$, $d(u_2)\ge |\{2, 6\}| + t_3 + t_4\ge 7$.
     One sees that if $\Delta = 8$,
     then $T_3\cup T_4 = T_{u v}$ and $d(u_2)\ge |\{2, 6\}| + t_3 + t_4\ge \Delta + 1$.
     Hence, $\Delta = 7$ and $C(u_2) = \{2, 6\}\cup T_3\cup T_4$, where $t_3 = 3$, $t_4 = 2$.
     However, for each $j\in T_2$, $H$ contains no $(a_2, j)_{(u_2, u)}$-path, a contradiction
     \footnote{Then $uu_2\to T_2$ reduces to an acyclic edge $(\Delta + 5)$-coloring for $H$ such that $|C(u)\cap C(v)| = 1$. }.
     In the other case, $w_3\ge 4$ and $w_4\ge 4$.
     Recall that $C(u_2) = \{2\}\cup T_3\cup T_4 = \{2\}\cup [7, \Delta + 5]$.
     \begin{itemize}
     \parskip=0pt
     \item
         $c(vu_3)= i_1$.
         It follows that $1\in C(u_3)$ and $T_3\subseteq C(u_1)$.
         Hence, ($A_{8.4}$) holds and $d(u_1)\le 6$.
         It follows that $[10, 12]\setminus C(u_1)\ne \emptyset$, say $10\not\in C(u_1)$.
         If $10\not\in B_3$, then $(u u_2, u v)\overset{\divideontimes}\to (6, 10)$.
         Otherwise, $10\in B_3$.
         Then $(u u_2, u u_4)\to (6, 10)$ reduces to an acyclic edge $(\Delta + 5)$-coloring for $H$ such that $|C(u)\cap C(v)| = 2$.
     \item
         $c(v u_3) = 5$.
         Then $(u u_2)\to 5$ reduces to the above case where $c(v u_3)\in A_{u v}$.
     \end{itemize}
\end{itemize}

(2.2.2.) $|\{c(u u_3), c(u u_4)\}\cap A_{u v}| = 1$, assuming w.l.o.g. $A_{u v} = [1, 3]$.

Assume that there exists a $j^*\in T_4$ such that mult$_{S_u}(j^*) = 2$ or $H$ contains no $(j^*, i)_{(u_4, u_i)}$-path for each $i\in C(u)$.
It follows from Corollary~\ref{auv3}(4) that $4\in \bigcup_{i\in [1, 3]}B_{i}$,
mult$_{\bigcup_{i\in A_{u v}}C(u_i)}(4)\ge 2$,
and for each $i\in [1, 3]$, $4\in C(u_i)$,
or there exists a $a_i\in C(u_i)$ such that $H$ contains a $(4, a_i)_{(u_{a_i}, u_i)}$-path.
If $C(u_i) = \{i\}\cup T_{u v}$ for some $i\in [1, 3]$,
then $(u u_i, u u_4)\to (4, j^*)$ reduces to an acyclic edge $(\Delta + 5)$-coloring for $H$ such that $a_{u v} = 2$.
Otherwise, for each $i\in [1, 2]$, $T_i\ne \emptyset$ and there exists a $a_i\in C(u_i)$.
Together with Corollary~\ref{auv3}(1),
we proceed with the following proposition:
\begin{proposition}
\label{prop32011}
If there exists a $j^*\in T_4$ such that mult$_{S_u}(j^*) = 2$ or $H$ contains no $(j^*, i)_{(u_4, u_i)}$-path for each $i\in C(u)$, then
\begin{itemize}
\parskip=0pt
\item[{\rm (1)}]
	$4\in \bigcup_{i\in [1, 3]}B_{i}$, mult$_{\bigcup_{i\in A_{u v}}C(u_i)}(4)\ge 2$,
\item[{\rm (2)}]
    for each $i\in [1, 3]$,
    \begin{itemize}
    \parskip=0pt
    \item[{\rm (2.1)}]
	$T_i\ne \emptyset$, $4\in C(u_i)$,
    or there exists a $a_i\in C(u_i)$ such that $H$ contains a $(4, a_i)_{(u_{a_i}, u_i)}$-path;
    \item[{\rm (2.2)}]
    $2/3/4\in C(u_1)$, and if $(T_4\cup \{4\})\setminus C(u_1)\ne \emptyset$, then $2/3\in C(u_1)$;
    \item[{\rm (2.3)}]
    $1/3/4\in C(u_2)$, and if $(T_4\cup \{4\})\setminus C(u_2)\ne \emptyset$, then $1/3\in C(u_2)$;
    \item[{\rm (2.4)}]
    $1/2/4\in C(u_3)$, and if $(T_4\cup \{4\})\setminus C(u_3)\ne \emptyset$, then $1/2\in C(u_3)$. 		
    \end{itemize}
\end{itemize}
\end{proposition}

Now  we can have the following three scenarios due to $c(v u_3)$ and $c(v u_4)$:

(2.2.2.1.) $c(v u_3)\in [1, 2]$, say $c(v u_3) = 2$.
It follows that $4\in C(u_3)$ and for each $j\in T_3$,
$H$ contains a $(1, j)_{(u_1, v_1)}$-path, a $(4, j)_{(u_3, u_4)}$-path and
there exists a $b_2\in \{5, 6\}\cap C(u_3)$ such that $H$ contains a $(b_2, j)_{(u_3, v_{b_2})}$-path.
Then $T_3\subseteq C(u_4)$, $T_4\subseteq C(u_3)$,
$w_3\ge 4$ and $t_3 = \Delta - 1 -(d(u_3) - w_3) = \Delta - d(u_3) + w_3 - 1\ge 3$,
$t_4 = \Delta - d(u_4) + w_4 - 1\ge 1$.

One sees that if $\Delta \le 7$, then $\Delta - 1 + w_3 + w_4\le 2\Delta$ and $w_3 + w_4\le \Delta + 1\le 8$;
if $\Delta \ge 8$, then $\Delta - 1 + w_3 + w_4\le 14$, and $w_3 + w_4\le 15 - \Delta\le 7$.
It follows that $\Delta\le 9$, $6\le w_3 + w_4\le8 $, and $w_3 + w_4 = 8$ if and only if $d(u_3) = d(u_4) = \Delta = 7$ and $T_3\cup T_4 = T_{u v} = [7, 12]$.

When $w_4 = 4$, it follows that $T_{u v} = [7, 12]$, $t_3 = t_4 = 3$.
Since $T_3\subseteq C(u_1)\cap C(u_2)\cap C(u_4)$,
$2\Delta + 11\le \sum_{i\in [1, 4]}w_i + 2t_{u v} + t_3\le \sum_{i\in [1, 4]}d(u_i)\le 2\Delta + 13$,
there exists a $j^*\in T_{4}$ such that mult$_{S_u}(j^*) = 2$.
One sees clearly that for each $i\in [1, 2]$, $(T_4\cup \{4\})\setminus C(u_i) \ne \emptyset$.
It follows from \ref{prop32011} that $4\in C(u_1)\cup C(u_2)$, and $2/3\in C(u_1)$, $1/3\in C(u_2)$.
Hence, $w_1 + w_2 \ge |\{1, 2/3\}| + |\{2, 1/3\}| + |\{4\}| = 5$.
Thus, $\sum_{i\in [1, 4]}d(u_i)\ge 2\Delta + 14$, a contradiction.

In the other case, $w_4\le 3$.
Note that
\[
\begin{aligned}
\sum_{i\in [1, 4]}d(u_i)
& \ge \sum_{i\in [1, 4]}w_i + 2t_{uv} + t_3 = \sum_{i\in [1, 4]}w_i + 2(\Delta - 1) + \Delta - d(u_3) + w_3 - 1 \\
& = 2\Delta + (\Delta - d(u_3)) - 3 + w_1 + w_2 + w_4 + 2w_3,
\end{aligned}
\]
\begin{equation}
\label{eq18}
\begin{aligned}
t_3 + t_4 + 2
& = \Delta - d(u_3) + w_3 - 1 + \Delta - d(u_4) + w_4 - 1 + 2 \\
& = (\Delta - d(u_3)) + (\Delta - d(u_4)) + w_3 + w_4 \ge w_3 + w_4.
\end{aligned}
\end{equation}

\begin{itemize}
\parskip=0pt
\item
    $c(vu_4) = i_1\in \{1, 3\}$.
    Then for each $j\in T_4$, there exists a $b_{i_1}\in C(u_4)\setminus \{4, i_1\}$ such that $H$ contains a $(j, b_{i_1})_{(u_4, v)}$-path.
    Then $w_4 = 3$, i.e., $W_4 = \{4, i_1, b_{i_1}\}$ and $H$ contains no $(i, j)_{(u_4, u_i)}$-path for each $i\in C(u)$, $j\in T_4$.
    It follows from \ref{prop32011} that $2/3/4\in C(u_1)$, $1/3/4\in C(u_2)$, and $4\in C(u_1)\cup C(u_2)$.
    Then $w_1 + w_2\ge |\{1, 2/3/4\}| + |\{2, 1/3/4\}| = 4$,
    and $\sum_{i\in [1, 4]}d(u_i)\ge 2\Delta + (\Delta - d(u_3)) - 3 + w_1 + w_2 + w_4 + 2w_3 = 2\Delta + (\Delta - d(u_3)) + 4 + 2w_3$.

    When $\Delta\ge 8$, $\sum_{i\in [1, 4]}d(u_i)\ge 2\Delta + (\Delta - d(u_3)) + 4 + 2w_3\ge 2\Delta + 1 + 4 + 2\times 4 = 2\Delta + 13\ge \Delta + 21$.
    When $w_3 = 5$, $\sum_{i\in [1, 4]}d(u_i)\ge 2\Delta + (\Delta - d(u_3)) + 4 + 2w_3\ge 2\Delta + 4 + 2\times 5 = 2\Delta + 14$.

    In the other case, $\Delta \in [6, 7]$, $w_3 = 4$, i.e., $W_3 = \{2, 3, 4, 5\}$ and for each $j\in T_3$, $H$ contains a $(5, j)_{(u_3, v_5)}$-path.
    Then $\sum_{i\in [1, 4]}d(u_i)\ge 2\Delta + (\Delta - d(u_3)) + 4 + 2w_3 \ge 2\Delta + (\Delta - d(u_3)) + 12$.
    Since $t_4\ge 2$, there exists a $j^*\in T_{4}$ such that mult$_{S_u}(j^*) = 2$.
    It follows from \ref{prop32011} (2.2--2.3) that $2/3/4\in C(u_1)$, $1/3/4\in C(u_2)$,
    and by Corollary~\ref{auv3}(7), $6\in S_u$.

    One sees that $t_3 + t_4 + 2 = (\Delta - d(u_3)) + (\Delta - d(u_4)) + w_3 + w_4\ge w_3 + w_4 \ge 7$.
    It follows that if $6\in \{d(u_3), d(u_4)\}$,
    then for each $i\in [1, 2]$, $T_4\setminus C(u_i)\ne \emptyset$, $2/3\in C(u_1)$, $1/3\in C(u_2)$,
    and $4\in C(u_1)\cup C(u_2)$, i.e., $w_1 + w_2\ge 5$.

    If $\Delta = 6$,
    then $\sum_{i\in [1, 4]}d(u_i)\ge 2\Delta + (\Delta - d(u_3)) - 3 + w_1 + w_2 + w_4 + 2w_3 \ge 2\Delta - 3 +5 + 3 + 2\times 4 = 2\Delta + 13 > 2\Delta + 12$.
    Otherwise, $\Delta = 7$.

    When $d(u_3) = 6$,
    $\sum_{i\in [1, 4]}d(u_i) \ge  2\Delta + (\Delta - d(u_3)) - 3 + w_1 + w_2 + w_4 + 2w_3 = 2\Delta + 1 - 3 + 5 + 3 + 2\times 4 = 2\Delta + 14 > 2\Delta + 13$.
    In the other case, $d(u_3) = 7$ and $T_3 = [7, 9]$.
    \begin{itemize}
    \parskip=0pt
    \item[(i)]
         $d(u_4) = 6$.
         Since $2/3\in C(u_1)$, $1/3\in C(u_2)$, $4\in C(u_1)\cup C(u_2)$, and $\sum_{i\in [1, 4]}d(u_i) \ge 2\Delta + 5 + \Delta - d(u_3) + 2w_3 = 2\Delta + 13$,
         it follows that $w_1 + w_2 = 5$,
         and $W_1\biguplus W_2 = \{1, 2/3, 2, 1/3, 4\}$.
    \item[(ii)]
         $d(u_3) = d(u_4) = 7$,
         say $C(u_3)= [2, 5]\cup [10, 11, 12]$, $C(u_4) = \{4, i_1, b, 12\}\cup [7, 9]$.
         Since $12\in C(u_1)\cup C(u_2)$, mult$_{S_u}(12) = 3$.
         Since $2(\Delta - 1) + |[7, 9]\cup \{12\}| + \sum_{i\in [1, 4]}w_i = 2\Delta + 9 + w_1 + w_2$,
         we have $w_1 + w_2 = 4$, and $W_1\biguplus W_2 = \{1, 2/3/4, 2, 1/3/4\}$.
         If $3\not\in C(u_1)$, then $(uu_1, uv)\overset{\divideontimes}\to (5, 7)$.
         Otherwise, $3\in C(u_1)$ and $W_1\biguplus W_2 = \{1, 3, 2, 4\}$.
    \end{itemize}
    Hence, for (i) and (ii), mult$_{C(u_1)\cup C(u_2)}(4) = 1$.
    Since $6\in S_u$, we have $b = 6$ and then $5\not\in C(u_1)\cup C(u_2)\cup C(u_4)$.
    Then $(u u_4, u u_3)\to (5, 7)$ reduces to an acyclic edge $(\Delta + 5)$-coloring for $H$ such that $C(u)\cap C(v) = \{1, 2, 5\}$.
    One sees that mult$_{C(u_1)\cup C(u_2)}(4) = 1$.
    Then by the similar discussion to show Corollary~\ref{auv3}(2),
    an acyclic edge $(\Delta + 5)$-coloring for $H$ can be obtained such that $|C(u)\cap C(v)|\le 2$.
\item
    $c(vu_4) = 6$.
    Note that $t_3 = \Delta - d(u_3) + w_3 - 1\ge 4 - 1 = 3$,
    $t_4 = \Delta - d(u_4) + w_4 - 1\ge 2 - 1 = 1$,
    and
    \[
    \begin{aligned}
     \sum_{i\in [1, 4]}w_i + 2t_{uv} + t_3 + t_4 & = \sum_{i\in [1, 4]}w_i + 2(\Delta - 1) + t_3 + t_4 = w_1 + w_2 \\ & + 2(w_3 + w_4) + 2\Delta - 4 + (\Delta - d(u_3)) + (\Delta - d(u_4))
    \end{aligned}
    \]

    \quad If $6\not\in C(u_1)$ and $H$ contains no $(6, i)_{(u_1, u_i)}$-path for each $i\in [2, 3]$,
    then $u u_1\to 6$ reduces to the above case where $C(u)\cap C(v) = \{2, 6, 3\}$ and $c(v u_4)\in A_{u v}$.
    Otherwise, we proceed with the following proposition:
    \begin{proposition}
    \label{6/2/3inCu1}
      $6/2/3\in C(u_1)$, and if $6\in C(u_1)$, then $H$ contains a $(6, i)_{(u_1, u_i)}$-path for some $i\in [2, 3]$, and $6\in C(u_1)\cup C(u_i)$.
    \end{proposition}

\quad First, assume that $w_4 = 3$ with $b\in C(u_4)$.
    Since for each $j\in T_4$, $H$ contains at most one of $(j, b)_{(u_4, u_b)}$-path and $(j, b)_{(u_4, v_b)}$-path,
    it follows from Corollary~\ref{auv3}(4) and (5) that mult$_{\bigcup_{i\in [1, 3]}C(u_i)}(i) \ge 2$ for some $i\in \{4, 6\}$.
    It follows that $4/6\in C(u_1)\cup C(u_2)$, $w_1 + w_2\ge 3$.

    \quad Since $\sum_{i\in [1, 4]}w_i + 2t_{u v} + t_3 + t_4 \ge w_1 + w_2 + 2(w_3 + w_4) + 2\Delta - 4 + (\Delta - d(u_3)) + (\Delta - d(u_4)) \ge 2\Delta - 4 + 3 + 2\times 7 = 2\Delta + 13$,
    there exists a $j^*\in T_4$ such that mult$_{S_u}(j^*) = 2$.
    It follows from \ref{prop32011} that $2/3/4\in C(u_1)$, $1/3/4\in C(u_2)$, $4\in C(u_1)\cup C(u_2)$,
    and by Corollary~\ref{auv3}(7), $5, 6\in S_u$.
    By Eq.~(\ref{eq18}), $t_3 + t_4 + 2 = (\Delta - d(u_3)) + (\Delta - d(u_4)) + w_3 + w_4 \ge (\Delta - d(u_3)) + (\Delta - d(u_4)) + 7\ge 7 > 6$.
    Further, if $\Delta\ge 8$, then $t_3 + t_4 + 2 \ge (\Delta - 7) + (\Delta - 7) + 7 \ge \Delta + 1$,
    if $\Delta = 7$, $6\in \{d(u_3), d(u_4)\}$ or $w_3 = 5$, then $t_3 + t_4 + 2\ge 8 > 7$.

    \quad One sees that if $\Delta\ne 7$, or $\Delta = 7$ and $6\in \{d(u_3), d(u_4)\}$ or $w_3 = 5$,
    then for each $i\in [1, 2]$, $T_4\setminus C(u_i)\ne \emptyset$,
    $2/3\in C(u_1)$, $1/3\in C(u_2)$, $4\in C(u_1)\cup C(u_2)$.
    Hence, $\sum_{i\in [1, 4]}d(u_i)\ge 2\Delta + (\Delta - d(u_3)) - 3 + w_1 + w_2 + w_4 + 2w_3 \ge 2\Delta + (\Delta - d(u_3)) - 3 + 5 + 3 + 2\times 4 = 2\Delta + (\Delta - d(u_3)) + 13$,
    and if $\Delta = 7$, $d(u_3) = 6$ or $w_3= 5$, then $\sum_{i\in [1, 4]}d(u_i)\ge 2\Delta + 14$.

    \quad Hence, it suffices to assume $d(u_3) = \Delta = 7$, $w_3 = 4$,
    and if $d(u_4) = 6$, $W_1\biguplus W_2 = \{1, 2/3\}\cup \{2, 1/3\}\cup \{4\}$.
    If $d(u_4) = 7$, say $12\in C(u_3)\cap C(u_4)$, since $12\in C(u_1)\cup C(u_2)$,
    $12\in T_0$, and $\sum_{i\in [1, 4]}d(u_i)\ge \sum_{i\in [1, 4]}w_i + 2t_{uv} + t_3 + |\{12\}| = 2\Delta + (\Delta - d(u_3)) - 2 + w_1 + w_2 + w_4 + 2w_3 \ge 2\Delta + 13$,
    thus, $W_1 = \{1, 2/3/4\}$, $W_2 = \{2, 1/3/4\}$.

    \quad One sees clearly that $5, 6\not\in C(u_1)\cup C(u_2)$.
    By \ref{6/2/3inCu1}, $3\in C(u_1)$ and $6\in C(u_3)$.
    Then $b = 5$, and since $6\not\in C(u_1)\cup C(u_2)$,
    by Corollary~\ref{auv3} (5), for each $j\in T_4$, $H$ contains a $(5, j)_{(u_4, v_5)}$-path.
    \begin{itemize}
    \parskip=0pt
    \item
        $d(u_4) = 6$.
        Since for each $i\in [1, 2]$, $T_4\setminus C(u_i)\ne \emptyset$,
        it follows from Corollary~\ref{auv3} (6.1) that $1, 2\in S_u$ and then $1\in C(u_2)$.
        Then $(u u_2, u u_3, u u_4)\to (5, 7, 3)$ and $uv \overset{\divideontimes}\to T_4\setminus C(u_1)$.
    \item
        $d(u_4) = 7$. 
        It follows that $W_1 = \{1, 3\}$, $W_2 = \{2, 4\}$.
        Since $1\not\in S_u$ and $2, 3\not\in C(u_1)$,
        it follows from Corollary~\ref{auv3} (1.2) and (6.1) that $T_4\subseteq C(u_1)\cap C(u_2)$, a contradiction.
    \end{itemize}

    \quad Next, assume that $w_4 = 2$.
    One sees clearly that for each $j\in T_4$, $H$ contains neither a $(i, j)_{(u_4, u_i)}$-path nor a $(l, j)_{(u_4, v_l)}$-path, where $i\in C(u)$, $l\in C(v)$.
    It follows from Corollary~\ref{auv3} (4--5) that $4/2/3, 6/2/3\in C(u_1)$, $4/1/3, 6/1/3\in C(u_2)$,
    and 
    $4, 6\in C(u_1)\cup C(u_2)$.

    \quad If $2/3\in C(u_1)$ and $1/3\in C(u_2)$,
    then $w_1 + w_2\ge 6$,
    and thus $\sum_{i\in [1, 4]}w_i + 2t_{uv} + t_3 + t_4 \ge w_1 + w_2 + 2(w_3 + w_4) + 2\Delta - 4 + (\Delta - d(u_3)) + (\Delta - d(u_4))\ge w_1 + w_2 + 2\times 6 + 2\Delta - 4 \ge 6 + 2\Delta + 8 = 2\Delta + 14$.
    Otherwise, $2, 3\not\in C(u_1)$ or $1, 3\in C(u_2)$.
    Then $u u_1\to T_1$ if $2, 3\not\in C(u_1)$,
    or $u u_2\to T_2$ if $1, 3\in C(u_2)$,
    and $u u_4\to 5$ reduces to the case (2.2.1.).
\end{itemize}

(2.2.2.2.) $c(vu_3) = 5$.
It follows from Lemma~\ref{auvge2} (1.1) that for each $j\in T_3$,
there exists a $a_3\in \{1, 2, 4\}$ such that $H$ contains a $(a_3, j)_{(u_3, u_{a_3})}$-path.
If $5\not\in C(u_2)$ and $H$ contains no $(i, 5)_{(u_2, u_i)}$-path for each $i\in C(u)$,
then $u u_2\to 5$ reduces to the case (2.2.2.1.).
Otherwise, $5\in C(u_2)$ or $H$ contains a $(a_2, 5)_{(u_2, u_{a_2})}$-path for some $a_2\in C(u)$.
Hence, since $|T_{uv}\cup \{5\}| = \Delta$,
together with Lemma~\ref{auvge2} (1.1),
we proceed with the following proposition:
\begin{proposition}
\label{234/134}
\begin{itemize}
\parskip=0pt
\item[{\rm (1)}]
	$1/3/4\in C(u_2)$, $2/3/4\in C(u_1)$;
\item[{\rm (2)}]
    for each $j\in T_1\cup \{5\}$,
    $H$ contains a $(a_1, j)_{(u_1, u_{a_1})}$-path for some $a_1\in \{2, 3, 4\}$;
\item[{\rm (3)}]
    for each $j\in T_2\cup \{5\}$,
    $H$ contains a $(a_2, j)_{(u_2, u_{a_2})}$-path for some $a_2\in \{1, 3, 4\}$.
\end{itemize}
\end{proposition}

Note that $T_1\cup T_2\subseteq C(u_3)$,
$T_3\cup T_1\subseteq C(u_2)$,
$T_3\cup T_2\subseteq C(u_1)$,
and for each $i\in [1, 3]$, $t_i = \Delta - d(u_i) + w_i - 1$.
It follows from Corollary~\ref{auv3} (5) that $5\in C(u_1)\cap C(u_2)$,
or for each $j\in T_3$,
there exists a $b_5\in C(u_3)\setminus \{a_3, 3\}$ such that $H$ contains a $(b_5, j)_{(u_3, v_{b_5})}$-path,
and by Lemma~\ref{lemma06}, $a_3\ne b_5\in C(u_3)$ and $w_3\ge 4$.

Assume that $5\not\in C(u_1)\cup C(u_2)$.
Since for each $i\in [1, 2]$, $H$ contains a $(5, j)_{(u_i, u_j)}$-path for some $j\in [3, 4]$,
assume w.l.o.g. $3\in C(u_2)$, $4\in C(u_1)$, $5\in C(u_4)$,
and $H$ contains a $(4, 5)_{(u_1, u_4)}$-path, $(3, 5)_{(u_2, u_3)}$-path.
If there exists an $\gamma\in (T_1\cup T_2\cup T_3)\setminus C(u_4)$,
then $(u u_1, u u_4)\to (5, \gamma)$ reduces to the case (2.2.2.1.).
Otherwise, $T_1\cup T_2\cup T_3\subseteq C(u_4)$.
Thus, $d(u_4)\ge |\{4, c(v u_4), 5\}| + t_1 + t_2 + t_3 = 3 + t_1 + t_2 + t_3$.
we proceed with the following proposition:
\begin{proposition}
\label{5notin}
If $5\not\in C(u_1)\cup C(u_2)$, then
$d(u_4)\ge 3 + t_1 + t_2 + t_3$.
\end{proposition}

\begin{itemize}
\parskip=0pt
\item
     $c(vu_4) = i_1$.
     It follows from Lemma~\ref{auvge2} (1.2) that for each $j\in T_4$,
     there exists a $b_{i_1}\in C(v)\setminus\{4, i_1\}$ such that $H$ contains a $(b_{i_1}, j)_{(v_{b_{i_1}}, u_4)}$-path.

     \quad When $5\in C(u_1)\cap C(u_2)$,
     one sees clearly that $t_1\ge 2$, $t_2\ge 2$, and $w_3\ge 3$,
     we have $d(u_3)= 7$, and $d(u_1) = d(u_2) = \Delta$.
     It follows that for each $i\in [1, 3]$, $d(u_i) = \Delta = 7$, $t_3 = 2$,
     $C(u_3) = \{3, 5, a_3\}\cup T_1\cup T_2$,
     $C(u_1) = \{1, a_1, 5\}\cup T_2\cup T_3$,
     $C(u_2) = \{2, a_2, 5\}\cup T_1\cup T_3$,
     and $d(u_4)\le 6$.
     One sees clearly that $(T_1\cup T_2\cup T_3)\setminus C(u_4)\ne \emptyset$.
     It follows from Corollary~\ref{auv3} (6.2) that mult$_{\bigcup_{i\in [1, 3]}C(u_i)}(4)\ge 2$,
     say $a_i = a_j = 4$, $i, j\in [1, 3]$.
     Since $T_i\cup T_j\subseteq C(u_4)$ by \ref{234/134},
     $d(u_4)\ge |\{4, i_1, b_{i_1}\}\cup T_i\cup T_j| = |\{4, i_1, b_{i_1}\}| + |T_i| + |T_j|\ge 3 + t_i + t_j\ge 7> 6$, a contradiction.

     \quad In the other case, since $a_3\ne b_5\in C(u_3)$, $w_3\ge 4$.
     We can have the following two scenarios:
     \begin{itemize}
     \parskip=0pt
     \item
         $5\in C(u_1)\cup C(u_2)$, say $5\in C(u_2)$.
         Since $t_1\ge 1$, $t_2\ge 2$, and $w_3\ge 3$,
         we have $d(u_3)= 7$, and $d(u_1) = d(u_2) = \Delta$.
         It follows that for each $i\in [1, 3]$, $d(u_i) = \Delta = 7$, $t_3 = 3$,
         $C(u_3) = \{3, 5, a_3, b_5\}\cup T_1\cup T_2$,
         $C(u_1) = \{1, a_1\}\cup T_2\cup T_3$,
         $C(u_2) = \{2, a_2, 5\}\cup T_1\cup T_3$,
         and $d(u_4)\le 6$.
         One sees clearly that $(T_1\cup T_2\cup T_3)\setminus C(u_4)\ne \emptyset$.
         It follows from Corollary~\ref{auv3} (6.2) that mult$_{\bigcup_{i\in [1, 3]C(u_i)}}(4)\ge 2$.
         Assuming w.l.o.g. $a_i = a_j = 4$, where $i, j\in [1, 3]$.

         \quad Since $T_i\cup T_j\subseteq C(u_4)$,
         $d(u_4)\ge |\{4, i_1, b_{i_1}\}\cup T_i\cup T_j| = |\{4, i_1, b_{i_1}\}| + |T_i| + |T_j| = 3 + t_i + t_j$.
         Recall that $d(u_4)\le 6$.
         It follows that $4\in C(u_1)\cap C(u_2)$ and by \ref{234/134},
         $T_1\cup T_2\cup \{5\}\subseteq C(u_4)$.
         Hence, $C(u_4) = \{4, i_1, 5\}\cup T_1\cup T_2$.
         Then $v u_4\to T_3$ reduces to an acyclic edge $(\Delta + 5)$-coloring for $H$ such that $a_{u v} = 2$
         \footnote{For each $j\in T_4 = T_3$, $H$ contain no $(5, j)_{(u_3, u_4)}$-path, a contradiction}.
     \item
          $5\not\in C(u_1)\cup C(u_2)$.
          Since $t_1\ge 1$, $t_2\ge 1$, $t_3\ge 3$,
          it follows from \ref{5notin} that $d(u_4)\ge 3 + t_1 + t_2 + t_3\ge 3 + 1 + 1 + 3 = 8 > 7$, a contradiction.
     \end{itemize}
 \item
     $c(v u_4) = 6$.
     When $6\not\in C(u_1)\cup C(u_2)$,
     if $H$ contains no $(3, 6)_{(u_1, u_3)}$-path, then $u u_1\to 6$;
     otherwise, $H$ contains a $(3, 6)_{(u_1, u_3)}$-path, then $u u_2\to 6$,
     this reduces to the above case where $c(v u_4)\in A_{u v}$.
     In the other case, $6\in C(u_1)\cup C(u_2)$.
     Together with $1, 2/3/4\in C(u_1)$ and $2, 1/3/4\in C(u_2)$,
     we have $w_1 + w_2\ge 5$, $t_1 + t_2\ge 3$.
     Note that $d(u_3)\ge |W_3\cup T_1\cup T_2| = |W_3| + |T_1| + |T_2| = w_3 + t_1 + t_2$.
     It follows that $w_3\le 4$ and if $w_3 = 4$, then $t_1 + t_2 = 3$.
     \begin{itemize}
     \parskip=0pt
     \item
          $w_3 = 3$. Then $5\in C(u_1)\cap C(u_2)$.
          Since $w_1 + w_2\ge 7$,
          we have $t_1 + t_2\ge 5$,
          and thus $d(u_3)\ge w_3 + t_1 + t_2 = 3 + 5\ge 8> 7$.
     \item
          $w_3 = 4$. Then $t_1 + t_2 = 3$, $t_3\ge 3$.
          Hence, $5\not\in C(u_1)\cup C(u_2)$.
          It follows from \ref{5notin} that $d(u_4)\ge 3 + t_1 + t_2 + t_3\ge 3 + 3 + 3 = 9 > 7$, a contradiction.
     \end{itemize}
\end{itemize}

This finishes the inductive step for the case where $G$ contains the configuration ($A_6$)--($A_8$),
and thus completes the proof of Lemma~\ref{3to21}.
\end{proof}

By \ref{prop3001}, we proceed with the following proposition,
or otherwise we are done:
\begin{proposition}
\label{auv1}
If $c(\cdot)$ is an acyclic edge $(\Delta(H) + 5)$-coloring for $H$ such that $C(u)\cap C(v) = \{i_1\}$,
then $T_{uv}\subseteq B_{i_1}\subseteq C(u_{i_1})\cap C(v_{i_1})$, and $t_{uv}\le \Delta - 1$.
\end{proposition}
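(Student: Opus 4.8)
The plan is to read off both parts of the statement directly from \ref{prop3001} together with the basic properties of $B_{i_1}$, so no new path-surgery is needed. We are in the case $C(u)\cap C(v) = A_{uv} = \{i_1\}$, so the union on the right-hand side of \ref{prop3001} is the single set $B_{i_1}$. Hence \ref{prop3001} immediately gives
\[
T_{uv} = C\setminus(C(u)\cup C(v)) \subseteq B_{i_1}.
\]
The containment $B_{i_1}\subseteq C(u_{i_1})\cap C(v_{i_1})$ was already recorded when $B_i$ was introduced: for every $j\in B_{i_1}$ the defining $(i_1,j)_{(u_{i_1},v_{i_1})}$-path produces a neighbor $x$ of $u_{i_1}$ and a neighbor $y$ of $v_{i_1}$ with $c(u_{i_1}x) = c(v_{i_1}y) = j$, so $j\in C(u_{i_1})$ and $j\in C(v_{i_1})$. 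Chaining the two inclusions yields $T_{uv}\subseteq C(u_{i_1})$, which already settles the first assertion.

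For the bound $t_{uv}\le \Delta - 1$ I would count the colors appearing at $u_{i_1}$. The edge $uu_{i_1}$ is colored $i_1$, so $i_1\in C(u_{i_1})$; on the other hand $i_1\in A_{uv}$ forces $i_1\notin T_{uv}$. Thus $T_{uv}$ and $\{i_1\}$ are disjoint subsets of $C(u_{i_1})$, and since the coloring is proper we have $|C(u_{i_1})| = d_H(u_{i_1})$. Because $H = G - uv$ satisfies $\Delta(H)\le \Delta$, this gives
\[
t_{uv} + 1 = |T_{uv}| + |\{i_1\}| \le |C(u_{i_1})| = d_H(u_{i_1}) \le \Delta,
\]
hence $t_{uv}\le \Delta - 1$.

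This proposition is a direct bookkeeping consequence of \ref{prop3001} rather than an independent argument, so I do not anticipate a genuine obstacle. The only two points worth stating with care are: (i) that the collapse of the union to the single set $B_{i_1}$ is exactly what the hypothesis $a_{uv} = 1$ supplies; and (ii) the disjointness $i_1\notin T_{uv}$, which is precisely what sharpens the crude degree estimate $t_{uv}\le \Delta$ into the claimed $t_{uv}\le \Delta - 1$.
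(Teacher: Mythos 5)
Your proof is correct and follows exactly the route the paper intends: the paper states Proposition~\ref{auv1} as an immediate consequence of \ref{prop3001} (the union collapsing to $B_{i_1}$ when $a_{uv}=1$) together with the already-recorded fact $B_i\subseteq C(u_{i_1})\cap C(v_{i_1})$, and the degree count $t_{uv}+1\le |C(u_{i_1})|=d_H(u_{i_1})\le\Delta$ is the intended (if unstated) justification of the bound. Nothing is missing.
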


\begin{lemma}
\label{auv1coloring}
If $a_{u v} = 1$, then in $O(1)$ time an acyclic edge $(\Delta + 5)$-coloring for $G$ can be obtained.
\end{lemma}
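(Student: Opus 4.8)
The plan is to treat $a_{uv}=1$ as the terminal case of the reduction chain and to show that it either colors $G$ outright or collides with the degree ceilings built into $(A_6)$--$(A_8)$. I would start from \ref{prop3001} and \ref{auv1}: writing $A_{uv}=\{i_1\}$ with $c(uu_{i_1})=c(vv_{i_1})=i_1$, every color absent from both $u$ and $v$ lies in $B_{i_1}$, so $T_{uv}\subseteq B_{i_1}\subseteq C(u_{i_1})\cap C(v_{i_1})$ and $t_{uv}\le\Delta-1$. The structural fact I would lean on throughout is that $u_{i_1}$ and $v_{i_1}$ each carry all of $T_{uv}$ together with $i_1$, so $d(u_{i_1})\ge t_{uv}+1$ and $d(v_{i_1})\ge t_{uv}+1$; since $|C(u)|=d(u)-1$ and $|C(v)|=d(v)-1$ give $t_{uv}=\Delta+8-d(u)-d(v)$ with $d(u)\in\{4,5\}$, these two neighbors are forced to have degree close to $\Delta$, and already $t_{uv}\le\Delta-1$ forces $d(u)+d(v)\ge 9$, which prunes several subcases of $(A_6)$--$(A_8)$ at once. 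I would then split on $d(u)=4$ (handled in Subsection 3.2.1 under $(A_6)$) and $d(u)=5$ (Subsection 3.2.2 under $(A_7),(A_8)$), refining the latter by $d(v)\in\{5,6\}$.

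For each subcase I would follow a fixed template. First I try to repaint a small set of edges incident to $u$ or $v$ so as to destroy the shared color $i_1$; the moment $C(u)\cap C(v)=\emptyset$ is reached, $uv\overset{\divideontimes}{\to}T_{uv}$ finishes $G$. The obstruction is that $T_{i_1}=R_{i_1}=\emptyset$ because $T_{uv}\subseteq C(u_{i_1})\cap C(v_{i_1})$, so neither $uu_{i_1}$ nor $vv_{i_1}$ can be repainted with a missing color directly; instead I recolor some $uu_\alpha$ with $\alpha\in U_{uv}$, or some $vv_\beta$ with $\beta\in V_{uv}$, and invoke Lemma~\ref{lemma06} to detect when such a move would create a new bichromatic cycle. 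Every blocked move is a certificate, via Lemma~\ref{lemma06} and the sets $T_i,R_j$ and the multisets $S_u,S_v$, that a missing color must occur a second time among the neighbors, thereby loading extra colors onto $C(u_{i_1})$, $C(v_{i_1})$, and the remaining $u_j$.

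Next I would convert these forced occurrences into a degree inequality. For $d(u)=4$ the ceiling to exceed is $\sum_{x\in N(u)}d(x)\le 2\Delta+13$ (or $2\Delta+14$ in $(A_{6.3})$), and I would show that the colors forced onto $u_1,u_2,u_3$ together with $d(v)$ make $\sum_{x\in N(u)}d(x)$ too large, contradicting $(A_6)$ and proving that no genuine obstruction survives. For $d(u)=5$ I would use Eq.~\eqref{eq16} for $(A_7)$ and its $(A_8)$ analogue as the ceilings on $\sum_{i\in[1,4]}d(u_i)$, combining $d(u_{i_1})\ge t_{uv}+1$ with the second occurrences on the remaining neighbors; in the triangle cases $uvu_3,uvu_4$ one of $u_3,u_4$ typically plays the role of $v_{i_1}$, so its high-degree constraint folds into the same sum and the count is tightest there.

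The hard part will be $d(u)=5$, $d(v)=6$, i.e.\ $(A_8)$, and within it $(A_{8.1})$, exactly as flagged in the overview. There the degree ceiling is loosest and the two triangles $uvu_3,uvu_4$ generate several interlocking Kempe chains, so a recoloring that breaks one $(i_1,j)$-path can spawn another. The delicate point is to arrange the recoloring sequence, and the accompanying bookkeeping of $T_i,R_j,S_u,S_v$, so that each blocked attempt contributes a genuinely new forced color rather than a repeat, while repeatedly checking through Lemma~\ref{lemma06} that no intermediate coloring reintroduces a bichromatic cycle. Once that ledger is complete the degree count closes precisely against the $(A_8)$ bound, which is why this subcase absorbs most of the work.
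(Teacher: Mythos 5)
Your plan matches the paper's proof of this lemma essentially step for step: the same use of \ref{auv1} to get $T_{uv}\subseteq C(u_{i_1})\cap C(v_{i_1})$ and $d(u)+d(v)\ge 9$, the same split on $d(u)\in\{4,5\}$ refined by $d(v)$, and the same closing mechanism of Lemma~\ref{lemma06}-certified blocked recolorings feeding a degree count against the ($A_6$)--($A_8$) ceilings. The one tactic your sketch underplays is that many of the paper's recolorings do not destroy the shared color but relocate it (changing which neighbor plays $u_{i_1}$ or $v_{i_1}$, e.g.\ ``reduces the proof to (2.1.2)''), and this ordering of internal reductions is what keeps the case analysis finite.
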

\begin{proof}
Let $c(u u_i) = i$, $i\in C(u) = [1, d(u) - 1]$, $c(vv_j) = j$, $j\in \{i_1\}\cup [d(u), d(u) + d(v) - 3]$,
where $i_1\in C(u)$.
One sees clearly that $t_{uv} = \Delta + 5 - (d(u) + d(v) - 3) = \Delta - (d(u) + d(v)) + 8$.
It follows from \ref{auv1} that $d(u) + d(v) \ge 9$.
Hence, $d(u) = 4$ and $d(v)\ge 5$, or $d(u) = 5$.

We discuss the following two cases for $d(u) = 4$ and $d(u) = 5$, respectively.

{\bf Case 1.} $d(u) = 4$ and $c(u u_i) = i$, $i\in [1, 3]$, $c(v v_j) = j$,
$j\in \{i_1\}\cup [4, d(v) + 1]$, where $i_1\in [1, 3]$.

In this case, $5\le d(v)\le 7$,
and $t_{u v} = \Delta - (d(u) + d(v)) + 8 = \Delta + 4 - d(v)$.

When $d(v) = 5$, $t_{u v} = \Delta - 1$,
assuming w.l.o.g. $i_1 = 1$,
it follows that $C(u_1) = C(v_1) = \{1\}\cup T_{u v}$,
then $(u u_1, v v_1, u v)\overset{\divideontimes}\to (4, 1, 2)$.
In the other case, $d(v)\in [6, 7]$.

{Case 1.1.} $d(v) = 6$.
Then $t_{u v} = \Delta - 2$, $V_{u v} = [4, 7]$.
Since $u v$ is contained in a $3$-cycle, assume that $v u_2\in E(G)$.
By symmetry, we consider the following three cases.

(1.1.1.) $c(v u_2) = 1$.
It follows from \ref{auv1} that $C(u_2) = [1, 2]\cup T_{u v}$,
$T_{u v}\subseteq C(u_1)$, and there exists an $i\in [4, 7]\setminus C(u_1)$.
Then $(u u_2, u v)\overset{\divideontimes} \to (i, 2)$.

(1.1.2.) $i_1 = 2$.
Assume w.l.o.g. $c(v u_2) = 4$.
It follows from \ref{auv1} that $C(u_2) = [2, 4]\cup T_{u v}$,
$T_{u v}\subseteq C(v_2)$, and there exists an $i\in \{1, 3\}\setminus C(v_2)$.
Then $(v u_2, u v)\overset{\divideontimes} \to (i, 4)$.

(1.1.3.) Assume w.l.o.g. $i_1 = 1$ and $c(v u_2) = 4$.
If $4\not\in C(u_1)$ and $H$ contains no $(3, 4)_{(u_1, u_3)}$-path,
then $u u_1\to 4$ reduces the proof to (1.1.1.).
If $2\not\in C(v_1)$ and $H$ contains no $(2, i)_{(v_1, v_i)}$-path for some $i\in [5, 7]$,
then $v v_1\to 2$ reduces the proof to (1.1.2.).
Otherwise, $C(v_1) = T_{u v}\cup \{1, 2/5/6/7\}$, $C(u_1) = T_{u v}\cup \{1, 3/4\}$,
and if $4\not\in C(u_1)$,
then $H$ contains a $(3, 4)_{(u_1, u_3)}$-path.

One sees clearly that $2, 4\not \in B_1$.
If $H$ contains no $(3, i)_{(u_2, u_3)}$-path for some $i\in (T_{u v}\cup [5, 7])\setminus C(u_2)$,
then $(u u_2, u v)\overset{\divideontimes}\to (i, 2)$.
Otherwise, $3\in C(u_2)$ and $H$ contains a $(3, i)_{(u_2, u_3)}$-path for each $i\in (T_{u v}\cup [5, 7])\setminus C(u_2)$.
It follows that $(T_{u v}\cup [5, 7])\subseteq C(u_2)\cup C(u_3)$ and $T_{u v}\setminus C(u_2) \ne \emptyset$.

Let $8\not\in C(u_2)$.
If $H$ contains no $(8, i)_{(u_2, v_i)}$-path for some $i\in [5, 7]$,
then $(v u_2, u v)\overset{\divideontimes}\to (8, 4)$.
Otherwise, assume w.l.o.g. $5\in C(u_2)$ and $H$ contains a $(5, 8)_{(u_2, v_5)}$-path.

If $H$ contains no $(3, 5)_{(u_1, u_3)}$-path,
$(u u_1, u v)\overset{\divideontimes}\to (5, 8)$.
Otherwise, $C(u_1) = \{1, 3\}\cup T_{u v}$,
and $H$ contains a $(3, i)_{(u_1, u_3)}$-path for each $i\in [4, 5]$.
It follows that $u_1\not\in N(v)$, $u_3\not\in \{v_1, v_4, v_5\}$.

If $u_3 = v_i$ for some $i\in [6, 7]$, since $3, 4, 5, i\in C(u_3) = C(v_i)$,
one sees that $H$ contains no $(3, i)_{(u_1, u_3)}$-path,
then $u u_1\to i$, $u v\overset{\divideontimes}\to T_{uv}\setminus C(u_3) (T_{u v}\setminus C(v_i))$.
Otherwise, $u_3\not\in N(v)$.
Hence, $u v$ is contained in no $3$-cycles other that $u v u_2$ and $\sum_{x\in N(u)}d(x)\le 2\Delta + 13$.

Since $d(u_1) = \Delta$, $2, 3, 4, 5\in C(u_2)$, $3, 4, 5\in C(u_3)$,
and $T_{u v}\cup [6, 7]\subseteq C(u_2)\cup C(u_3)$,
$\sum_{x\in N(u)}d(x)\ge \Delta + 4 + 3 + \Delta - 2 + 2 + 6 = 2\Delta + 13$.

It follows that $\sum_{x\in N(u)}d(x) = 2\Delta + 13$ and $1, 2\not\in C(u_3)$.
Then $(u u_1, u u_3, u v)\overset{\divideontimes}\to (5, 1, 8)$.

{Case 1.2.} $v$ is $(7, 4^-)$, $u_1 u_2$, $u_1 u_3$, $v u_2$,
$v u_3\in E(G)$ and $\min \{d(u_1), d(u_2), d(u_3)\}\le 8$.
Then $t_{u v} = \Delta - 3\ge 4$, $V_{u v} = [4, 8]$,
and assume w.l.o.g. $d(v_7)\le 5$, $d(v_8)\le 5$, $d(v_6)\ge 6$.

Assume $d(v_{i_1})\le 5$, say $i_1 = 1$.
Since $T_{u v} = [9, 12] \subseteq C(u_1)\cap C(v_1)$,
it follows that $\Delta = 7$ and $C(v_1) = \{1\}\cup T_{u v}$.
If $2, 3\not\in C(u_1)$, then $u u_1\to [4, 8]\setminus C(u_1)$, $(v v_1, u v)\to (2, 1)$.
Otherwise, $2/3\in C(u_1)$.
One sees that the same argument applies if $v v_1\to [2, 3]$,
and thus $1/3\in C(u_2)$, $1/2\in C(u_3)$.
Hence, for each $i\in [1, 3]$, $|[4, 8]\cap C(u_i)|\le 1$.
Then $u u_1\to [4, 8]\setminus S_u$, $u v\overset{\divideontimes}\to 2$.
In the sequel we continue the proof with the following Proposition, or otherwise we are done:
\begin{proposition}
\label{auv176+}
$d(v_{i_1})\ge 6$,
and if re-coloring some edges $E'\subseteq E(H)$ incident to $u$ or $v$ does not produce any new bichromatic cycles,
but $C(u)\cap C(v) = \{c(v v_{j})\} = \{\alpha\}$ with $d(v_j)\le 5$,
then an acyclic edge $(\Delta + 2)$-coloring for $G$ can be obtained in $O(1)$ time.
\end{proposition}

(1.2.1.) $i_1\in \{c(v u_2), c(v u_3)\}$ and $i_1\in \{2, 3\}$, say $(v u_2, v u_3)_c = (3, 4)$.
It follows from \ref{auv1} that $T_{u v}\subseteq C(u_2)\cap C(u_3)$,
$|[5, 8]\cap C(u_2)|\le 1$, $|[5, 8]\cap C(u_3)|\le 1$.
If $H$ contains no $(1, i)_{(u_2, u_1)}$-path for some $[5, 8]\setminus (C(u_2)\cup C(u_3))$,
then $u u_2\to [5, 8]\setminus (C(u_2)\cup C(u_3))$ and $uv\overset{\divideontimes}\to 2$.
Otherwise, $C(u_2) = T_{u v}\cup [1, 3]$,
and $H$ contains a $(3, i)_{(u_2, u_1)}$-path for each $i\in [5, 8]\setminus C(u_3)$.
One sees that $[7, 8]\setminus C(u_3)\ne \emptyset$.
Then $u u_3\to [7, 8]\setminus C(u_3)$,
and by \ref{auv176+}, we are done.

(1.2.2.) $i_1\in \{c(v u_2), c(v u_3)\}$ and $i_1 = 1$, say $(v u_2, v u_3)_c = (1, 4)$.
It follows from \ref{auv1} that $T_{u v}\subseteq C(u_1)\cap C(u_2)$,
$|[5, 8]\cap C(u_1)|\le 2$, $|[5, 8]\cap C(u_2)|\le 1$.
When $3\not\in C(u_2)$, then $[5, 8]\setminus (C(u_1)\cup C(u_2))$ and $u v\overset{\divideontimes}\to 2$.
In the other case, $C(u_2) = T_{u v}\cup [1, 3]$.
If $c(u_1 u_2) = 3$, then $[5, 8]\setminus C(u_1)$ and $uv\overset{\divideontimes}\to 2$.
Otherwise, $(u u_2, v u_2)\to (4, 2)$,
and by \ref{auv176+}, we are done.

(1.2.3.) $i_1\not\in \{c(v u_2), c(v u_3)\}$ and $i_1\in \{2, 3\}$,
say $i_1 = 2$, $(v u_2, v u_3)_c = (5, 4)$.
It follows from \ref{auv1} that $T_{u v}\subseteq C(u_2)\cap C(v_2)$,
$|[5, 8]\cap C(u_2)|\le 1$, $|[5, 8]\cap C(v_2)|\le 2$.

If $4\not\in C(u_2)$ and $H$ contains no $(4, 1)_{(u_2, u_1)}$-path,
then $u u_2\to 4$ reduces the proof to (1.2.1.).
Otherwise, $4\in C(u_2)$ or $H$ contain a $(4, 1)_{(u_1, u_2)}$-path.
When $4\in C(u_2)$, i.e., $C(u_2) = \{2, 5, 4\}\cup T_{u v}$,
$(u u_2, v u_2)\to (5, 3)$ reduces the proof to (1.2.1.).
In the other case, $C(u_2) = \{2, 5, 1\}\cup T_{uv}$ and $H$ contains a $(4, 1)_{(u_1, u_2)}$-path.
Assume that $4\in C(u_3)\setminus \{c(u_1 u_3)\}$.

If $H$ contains no $(2, 3)_{(u_3, v)}$-path,
then $(v u_2, u v)\overset{\divideontimes}\to (3, 5)$.
If $H$ contains no $(1, 5)_{(u_1, u_3)}$-path,
then $(v u_2, u u_2)\to (3, 5)$ reduces the proof to (1.2.1.).
Otherwise,  $H$ contains a $(2, 3)_{(u_3, v)}$-path and a $(1, 5)_{(u_1, u_3)}$-path.
It follows that $3\in C(v_2)$, $2\in C(u_3)$, and $5\in C(u_1)$.

If $1\in C(v_2)$, i.e., $C(v_2) = [1, 3]\cup T_{u v}$,
then $(v v_2, v u_2)\to (5, 3)$ reduces the proof to (1.2.1.).
Otherwise, $1\not\in C(v_2)$.
One sees clearly that $1, 3\not\in B_2$.

If $H$ contains no $(1, i)_{(u_1, u_2)}$-path for some $i\in [7, 8]$,
then $u u_2\to i$ and by \ref{auv176+}, we are done.
Otherwise, $H$ contains a $(1, i)_{(u_1, u_2)}$-path for each $i\in [7, 8]$.
If there exits an $i\in \{5, 7, 8\}\setminus C(u_3)$,
then $(u u_3, u v)\overset{\divideontimes}\to (i, 3)$.
Otherwise, $\{5, 7, 8\}\subseteq C(u_1)\cap C(u_3)$.

If $H$ contains no $(1, i)_{(u_1, u_3)}$-path for some $i\in (\{6\}\cup T_{u v})\setminus C(u_3)$,
then $(u u_3, u v)\overset{\divideontimes}\to (j, 3)$.
If $H$ contains no $(3, j)_{(u_1, u_3)}$-path for some $j\in (\{6\}\cup T_{u v})\setminus C(u_1)$,
then $(u u_1, u v)\overset{\divideontimes}\to (j, 1)$.
Otherwise, $H$ contains a $(1, i)_{(u_1, u_3)}$-path for each $i\in (\{6\}\cup T_{u v})\setminus C(u_3)$,
and a $(3, j)_{(u_1, u_3)}$-path for each $j\in (\{6\}\cup T_{u v})\setminus C(u_1)$.
It follows that $3\in C(u_1)$, $1\in C(u_3)$ and $\{6\}\cup T_{u v}\subseteq C(u_1)\cup C(u_3)$.

Together with $\{1\}\cup [3, 5]\cup [7, 8]\subseteq C(u_1)$ and $[1, 5]\cup [7, 8]\subseteq C(u_3)$,
we have $d(u_1) + d(u_3)\ge t_{uv} + 2\times 2 + 1 + 4 + 5 = \Delta + 11> \Delta + 8$, a contradiction.

(1.2.4.) $i_1\not\in \{c(v u_2), c(v u_3)\}$ and $i_1 = 1$, say $(v u_2, v u_3)_c = (5, 4)$.
It follows from \ref{auv1} that $T_{u v}\subseteq C(u_1)\cap C(v_1)$,
$|[5, 8]\cap C(u_1)|\le 2$, $|[5, 8]\cap C(v_1)|\le 2$.

If $5\not\in C(u_1)$ and $H$ contains a $(3, 5)_{(u_1, u_3)}$-path,
then $u u_1\to 5$ reduces the proof to (1.2.2.).
If $4\not\in C(u_1)$ and $H$ contains a $(2, 4)_{(u_1, u_2)}$-path,
then $u u_1\to 4$ reduces the proof to (1.2.2.).
Otherwise, $5\in C(u_1)$ or $H$ contains a $(3, 5)_{(u_1, u_3)}$-path,
$4\in C(u_1)$ or $H$ contains a $(2, 4)_{(u_1, u_2)}$-path.
It follows that $3/5, 2/4\in C(u_1)$ and $C(u_1) = \{1, 3/5, 2/4\}\cup T_{u v}$.

If for some $i\in [7, 8]$, $H$ contains no $(i, j)_{(u_1, u)}$-path for each $j\in [2, 3]$,
then $u u_1\to i$, and by \ref{auv176+}, we are done.
Otherwise, for each $i\in [7, 8]$, $H$ contains a $(i, j)_{(u_1, u)}$-path for some $j\in [2, 3]$,
and $7, 8\in C(u_2)\cup C(u_3)$.
Assuming w.l.o.g. $2\in C(u_1)$, $H$ contains a $(2, 7)_{(u_1, u)}$-path,
and $2\in C(u_1)\setminus \{c(u_1 u_3)\}$.
Since $c(u_1u_2)\not\in \{3, 5\}$, assume w.l.o.g. $c(u_1 u_2) = 9$ and $1\in C(u_2)$.

If $6\not\in C(u_2)\cup C(u_3)$, then $(u u_1, u v)\overset{\divideontimes}\to (6, 9)$;
if $7\not\in C(u_3)$ and $3\not\in B_1$,
then $(u u_3, u v)\overset{\divideontimes}\to (7, 3)$.
Otherwise, $6\in C(u_2)\cup C(u_3)$, $7\in C(u_3)$ or $3\in B_1$.
\begin{itemize}
\parskip=0pt
\item
    $3\in C(u_1)$, i.e., $C(u_1) = [1, 3]\cup T_{u v}$ and assume w.l.o.g. $c(u_1 u_3) = 10$.
    Then $1, 5\in C(u_3)$.

  \quad  If $2, 9\not\in C(u_3)$, then $(u u_1, u u_3)\to (5, 9)$;
    if $3, 10\not\in C(u_2)$, then $(u u_1, u u_2)\to (4, 10)$,
    this reduces the proof to (1.2.2.).
    Otherwise, $2/9\in C(u_3)$, $3/10\in C(u_2)$.
    One sees that $1$, $4$, $2$, $5$, $9$, $3/10\in C(u_2)$,
    $1$, $5$, $3$, $4$, $10$, $2/9\in C(u_3)$, $6, 7, 8\in C(u_2)\cup C(u_3)$,
    and $d(u_2) + d(u_3)\le \Delta + 8$.
    Since $6 + 6 + 3 + |[11, \Delta + 5]| = \Delta + 10$,
    there exists a $j\in T_{u v}\setminus (C(u_2)\cup C(u_3))$,
    say $j = 11$.

    \quad If $7\not\in C(u_3)$, then $(u u_3, u u_1, u v)\overset{\divideontimes}\to (7, 5, 11)$.
    Otherwise, $7\in C(u_2)\cap C(u_3)$.
    Likewise, $8\in C(u_2)\cap C(u_3)$.
    If $3\not\in B_1$, then $(u u_3, u v)\overset{\divideontimes}\to (11, 3)$,
    if $2\not\in B_1$, then $(u u_2, u v)\overset{\divideontimes}\to (11, 2)$,
    otherwise, $2, 3\in B_1$.
    It follows that $C(v_1) = [1, 3]\cup T_{uv}$.

    \quad When $H$ contains neither a $(2, 6)_{(u_2, u_1)}$-path nor a $(3, 6)_{(u_3, u_1)}$-path,
    first, $u u_1\to 6$.
    By the same argument, we have $C(v_6) = \{6, 2, 3\}\cup T_{u v}$ and $11\in B_6$.
    Then $(v v_1, v v_6, u v)\overset{\divideontimes}\to (6, 1, 11)$.
    In the other case, assume w.l.o.g. $6\in C(u_2)$ and $H$ contains a $(2, 6)_{(u_2, u_1)}$-path.

    \quad If $6\not\in C(u_3)$, then $(u u_3, u u_1, u v)\overset{\divideontimes}\to (6, 5, 11)$.
    Otherwise, $6\in C(u_3)\cap C(u_2)$.
    It follows that $d(u_2)\ge 9$ and $d(u_3)\ge 9$ and $d(u_1) = \Delta \ge 9$, a contradiction.
\item
    $3\not\in C(u_1)$, i.e., $C(u_1) = \{1, 2, 5\}\cup T_{u v}$ and assume w.l.o.g. $c(u_1 u_3)\in \{5, 10\}$.
    It follows that $7, 8\in C(u_2)\cap C(u_3)$.
    If $6\not\in C(u_2)$, then $(u u_1, u v)\overset{\divideontimes}\to (6, 9)$;
    if $2\not\in C(u_3)$, then $T_{u v}\setminus C(u_3)$ and $uv\overset{\divideontimes}\to 3$.
    Otherwise, $6\in C(u_2)$ and $2\in C(u_3)$.
    One sees that $[1, 2]$ $\cup$ $[4, 9]\subseteq C(u_2)$,
    $[2, 4]$ $\cup$ $\{7, 8, 5/10\}\subseteq C(u_3)$, and $d(u_2) + d(u_3)\le \Delta + 8$.
    Since $8 + 6 + |[11, \Delta + 5]| = \Delta + 9$,
    there exists a $j\in T_{uv}\setminus (C(u_2)\cup C(u_3))$,
    say $j = 11$.
    Then $(u u_3, u v)\overset{\divideontimes}\to (11, 3)$.
\end{itemize}

{\bf Case 2.} $d(u) = 5$ and $c(u u_i) = i$, $i\in [1, 4]$, $c(v v_j) = j$,
$j\in \{i_1\}\cup [5, d(v) + 2]$, where $i_1\in [1, 4]$.

In this case, $5\le d(v)\le 6$,
and $t_{u v} = \Delta - (d(u) + d(v)) + 8 = \Delta + 3 - d(v)$.

{Case 2.1.} $d(v) = 5$.
Then $t_{u v} = \Delta - 2$, $V_{u v} = [5, 7]$.

By symmetry, we consider the following three cases.

(2.1.1.) $i_1\in \{c(u u_3), c(u u_4)\}$.
Assuming w.l.o.g. $i_1 = 3$ and $c(v u_3) = 5$.
It follows from \ref{auv1} that $C(u_3) = \{3, 5\}\cup T_{uv}$,
$T_{u v}\subseteq C(v_3)$.
One sees clearly from $t_{u v} = \Delta - 2$ that there exists an $i\in \{1, 2\}\setminus C(v_3)$.
Then $v u_3\to i$ and $u v\overset{\divideontimes}\to 5$.

(2.1.2.) $i_1\in C(u)\setminus \{c(u u_3), c(u u_4)\}$ and $i_1\in \{c(v u_3), c(v u_4)\}$.
Assuming w.l.o.g. $(v u_3, v u_4)_c = (1, 6)$.
It follows from \ref{auv1} that $C(u_3) = \{1, 3\}\cup T_{u v}$,
$T_{u v}\subseteq C(u_1)$.
One sees clearly from $t_{u v} = \Delta - 2$ that there exists an $i\in [5, 7]\setminus C(u_1)$.
Then $u u_3\to i$ and $u v\overset{\divideontimes}\to 3$.

(2.1.3.) $i_1\in C(u)\setminus \{c(u u_3), c(u u_4)\}$ and $i_1\in C(v)\setminus \{c(v u_3), c(v u_4)\}$.
Assuming w.l.o.g. $i_1 = 1$ and $(v u_3, v u_4)_c = (5, 6)$.
It follows from \ref{auv1} that $T_{u v}\subseteq C(u_1)\cap C(v_1)$,
and then $w_1\le 2$, $|C_{uv}\cap C(v_1)|\le 2$.
If $2, 4, 5\not\in C(u_1)$, then $u u_1\to 5$,
if $2, 3, 6\not\in C(u_1)$, then $u u_1\to 6$,
this reduces the proof to (2.1.2.).
If $3, 6, 7\not\in C(v_1)$, then $v v_1\to 3$,
if $4, 5, 7\not\in C(v_1)$, then $v v_1\to 4$,
this reduces the proof to (2.1.1.).
Otherwise,  $2/4/5, 2/3/6\in C(v_1)$, $3/6/7, 4/5/7\in C(v_1)$.
Hence, $2\in C(u_1)$, $7\in C(v_1)$,
and $C(u_1) = [1, 2]\cup T_{u v}$, $C(v_1) = \{1, 7\}\cup T_{u v}$.

Further, assume that $H$ contains a $(2, i)_{(u_1, u_2)}$-path for each $i\in [5, 6]$,
a $(7, i)_{(v_1, v_2)}$-path for each $v\in [3, 4]$ and $5, 6\in C(u_2)$, $3, 4\in C(v_7)$.
One sees clearly that there exists a $\alpha\in T_2$, $\beta\in R_7$.
If $H$ contains no $(2, 7)_{(u_1, u_2)}$-path, then $(u u_1, u v)\to (7, \beta)$,
if $H$ contains no $(2, 7)_{(v_1, v_7)}$-path, then $(v v_1, u v)\to (2, \alpha)$.
Otherwise, $H$ contains a $(2, 7)_{(u_1, u_2)}$-path and a $(2, 7)_{(v_1, v_7)}$-path.

If $1\not\in S_u$, then $(u u_2, u u_1)\to (1, 5)$ and $u v\to T_{u v}\setminus C(u_3)$,
if $3\not\in C(u_4)$, then $(v u_4, u v)\overset{\divideontimes}\to (3, 6)$,
if $5\not\in C(u_4)$, then $(u u_4, u v)\overset{\divideontimes}\to (5, 4)$.
Otherwise, $1\in S_u$, $3, 5\in C(u_4)$ and likewise $4, 6\in C(u_3)$.

If for some $i\in T_3$, $H$ contains no $(i, j)_{(u_3, u_j)}$-path for each $j\in \{2, 4\}$,
then $(u u_3, u v)\overset{\divideontimes}\to (i, 3)$.
Otherwise, for each $i\in T_3$, $H$ contains a $(i, j)_{(u_3, u_j)}$-path for some $j\in \{2, 4\}$.

If there exists an $i\in T_3\setminus C(u_4)$, since $H$ contains a $(2, i)_{(u_2, u_3)}$-path,
then $(u u_4, u v)\overset{\divideontimes}\to (i, 4)$.
If there exists an $i\in T_3\setminus C(u_2)$, since $H$ contains a $(4, i)_{(u_4, u_3)}$-path,
then $(u u_2, u v)\overset{\divideontimes}\to (i, 2)$.
Otherwise, $T_3\subseteq C(u_2)\cap C(u_4)$.
Likewise, $T_4\subseteq C(u_2)\cap C(u_3)$.
It follows that $T_{u v}\subseteq C(u_2)\cup C(u_3)\cup C(u_4)$,
and from $t_3\ge 2$, $t_4\ge 2$ that $|T_0|\ge 4$.

Since $W_1 = \{1, 2\}$, $\{2\}\cup [5, 7]\subseteq W_2$, $[3, 6]\cup C(u_3)\cap C(u_4)$,
and $1\in S_u$, $T_{uv}\subseteq C(u_1)$,
we have $\sum_{x\in N(u)\setminus \{v\}}d(x) = \sum_{i\in [1, 4]}d(u_i) = \sum_{i\in [1, 4]}w_i + \sum_{i\in T_{uv}}\mbox{mult}_{S_u}(i)
	\ge 2 + 4 + 4 + 4 + 1 + 2t_{uv} + |T_0| \ge 15 + 2(\Delta - 2) + |T_0|\ge 2\Delta + 15 > \max\{2\Delta + 13, \Delta + 20\}$,
and thus none of ($A_{7.1}$)--($A_{7.3}$) holds.

{Case 2.2.} $d(v) = 6$.
Then $t_{u v} = \Delta - 3$, $V_{u v} = [5, 8]$.

By symmetry, we consider the following three cases.

(2.2.1.) $i_1\in \{c(u u_3), c(u u_4)\}$ and $i_1\in \{c(v u_3), c(v u_4)\}$.
Assuming w.l.o.g. $(v u_3, v u_4)_c = (5, 3)$.
It follows from \ref{auv1} that $T_{u v}\subseteq C(u_3)\cap C(u_4)$.
One sees clearly from $t_{u v} = \Delta - 3$ that $w_3\le 3$, $w_4\le 3$.
When $[6, 8]\cap C(u_3) = \emptyset$ and there exists an $i\in [1, 2]\setminus (W_3\cup W_4)$,
then $(v u_3, u v)\overset{\divideontimes}\to (i, 5)$.
In the other case, $[6, 8]\cap C(u_3) \ne \emptyset$ or $1, 2\in W_3\cup W_4$.
We can have the following two scenarios:
\begin{itemize}
\parskip=0pt
\item
     $[6, 8]\cap C(u_3) = \emptyset$.
     It follows that $1\in W_1$, $2\in W_2$, i.e., $C(u_3) = \{1, 3, 5\}\cup T_{u v}$, $C(u_4) = [2, 4]\cup T_{u v}$.
     If there exists an $i\in [6, 8]\setminus C(u_2)$,
     then $(v u_4, u v)\overset{\divideontimes}\to (i, 4)$.
     Otherwise, $[6, 8]\subseteq C(u_2)$ and there exists a $j\in T_{u v}\setminus C(u_2)$.
     Then $(v u_4, v u_3, u v)\overset{\divideontimes}\to (5, 2, j)$.
\item
     $6\in W_1$, i.e., $C(u_3) = \{3, 5, 6\}\cup T_{uv}$.
     If $1, 2\not\in C(u_4)$,
     then $vu_4\to \{7, 8\}\setminus C(u_4)$ and $u v\overset{\divideontimes}\to 4$.
     Otherwise, assume w.l.o.g. $1\in C(u_4)$.
     Then $(v u_4, u u_3, u v)\overset{\divideontimes}\to (2, 7, 3)$.
\end{itemize}

(2.2.2.) $i_1\in \{c(u u_3), c(u u_4)\}$ and $i_1\in C(v)\setminus \{c(v u_3), c(v u_4)\}$.
Assuming w.l.o.g. $(v u_3, v u_4)_c = (5, 6)$.
It follows from \ref{auv1} that $T_{u v}\subseteq C(u_3)\cap C(v_3)$,
and then $w_3\le 3$, $|C(v_3)\cap C_{u v}|\le 3$.
When $1, 2, 6\not\in W_3$, 
$u u_3\to 6$ reduces the proof to (2.2.1.)
In the other case, $6/1/2\in W_3$. 

If $6\in W_3$, i.e., $C(u_3) = \{3, 5, 6\}\cup T_{u v}$,
then $(v u_3, u u_3)\to (4, 5)$ reduces the proof to (2.2.1.).
Otherwise, assume w.l.o.g. $C(u_3) = \{3, 5, 1\}\cup T_{u v}$.
If there exists an $i\in \{2, 4\}\setminus C(v_3)$,
then $(v u_3, u v)\overset{\divideontimes}\to (i, 5)$.
Otherwise, $C(v_3) = [2, 4]\cup T_{u v}$,
$(v v_3, v u_3)\to (5, 4)$ reduces the proof to (2.2.1.).

(2.2.3.) $i_1\in C(u)\setminus \{c(u u_3), c(u u_4)\}$ and $i_1\in \{c(v u_3), c(v u_4)\}$.
Assuming w.l.o.g. $(v u_3, v u_4)_c = (1, 6)$.
It follows from \ref{auv1} that $T_{u v}\subseteq C(u_1)\cap C(u_3)$,
and then $w_3\le 3$.
When $4, 5, 7, 8\not\in W_3$, 
$v u_3\to 4$ reduces the proof to (2.2.1.)
In the other case, $4/5/7/8\in W_3$. 

If $4\in W_3$, i.e., $C(u_3) = \{3, 1, 4\}\cup T_{u v}$,
then $(v u_3, u u_3)\to (3, 6)$ reduces the proof to (2.2.1.).
Otherwise, assume w.l.o.g. $C(u_3) = \{3, 1, 7\}\cup T_{u v}$.
If there exists an $i\in \{5, 8\}\setminus C(u_1)$,
then $(u u_3, u v)\overset{\divideontimes}\to (i, 3)$.
Otherwise, $C(u_1) = \{1, 5, 8\}\cup T_{u v}$,
$(u u_1, u u_3, u v)\overset{\divideontimes}\to (3, 8, 9)$.

(2.2.4.) $i_1\in C(u)\setminus \{c(u u_3), c(u u_4)\}$ and $i_1\in C(v)\setminus \{c(v u_3), c(v u_4)\}$.
Assuming w.l.o.g. $i_1 = 1$ and $(v u_3, v u_4)_c = (5, 6)$.
Here let $\{j_1, j_2, j_3\} = [2, 4]$.
It follows from \ref{auv1} that $T_{u v}\subseteq C(u_1)\cap C(v_1)$,
and then $w_1\le 3$, $|C_{u v}\cap C(v_1)|\le 3$.
Note that if $d(u_3) = d(u_4) = 7$, then $\min\{d(u_1), d(u_2)\}\le 6$;
if $d(u_1) = d(u_2) = \Delta$, then $\min\{d(u_3), d(u_4)\}\le 6$.

If $5\not\in C(u_1)$ and $H$ contains no $(5, i)_{(u_1, u_i)}$-path for each $i\in \{2, 4\}$,
then $u u_1\to 5$ and reduces the proof to (2.2.3.).
Otherwise, $5\in C(u_1)$ or $H$ contains a $(5, i)_{(u_1, u_i)}$-path for some $i\in \{2, 4\}$.
If $5\not\in C(u_1)\cup C(u_4)$ and $4\not\in B_1$,
since $H$ contains a $(5, 2)_{(u_1, u_2)}$-path,
then $(u u_4, u v)\overset{\divideontimes}\to (5, 4)$.
Otherwise, $5\in C(u_1)\cup C(u_4)$ or $4\in B_1$.
Likewise, we proceed with the following proposition:
\begin{proposition}
\label{256u3u4}
\begin{itemize}
\parskip=0pt
\item[{\rm (1)}]
	$5\in C(u_1)$ or $H$ contains a $(5, i)_{(u_1, u_i)}$-path for some $i\in \{2, 4\}$;
\item[{\rm (2)}]
	$5\in C(u_1)\cup C(u_4)$ or $4\in B_1\subseteq C(u_1)\cap C(v_1)$.
\item[{\rm (3)}]
	$6\in C(u_1)$ or $H$ contains a $(6, i)_{(u_1, u_i)}$-path for some $i\in \{2, 3\}$;
\item[{\rm (4)}]
	$6\in C(u_1)\cup C(u_3)$ or $3\in B_1\subseteq C(u_1)\cap C(v_1)$.
\item[{\rm (5)}]
	$5/2/4, 6/2/3\in C(u_1)$.
\end{itemize}
\end{proposition}

If $2\not\in B_1$, and for some $j\in T_2\cup ([5, 8]\setminus C(u_2))$,
$H$ contains no $(i, j)_{(u_2, u_i)}$-path for each $i\in \{1, 3, 4\}$,
then $(u u_2, u v)\overset{\divideontimes}\to (j, 2)$.
If $3\not\in B_1$, and for some $j\in T_3\cup ([6, 8]\setminus C(u_3))$,
$H$ contains no $(i, j)_{(u_3, u_i)}$-path for each $i\in \{1, 2, 4\}$,
then $(u u_3, u v)\overset{\divideontimes}\to (j, 3)$.
If $4\not\in B_1$, and for some $j\in T_4\cup (\{5, 7, 8\}\setminus C(u_4))$,
$H$ contains no $(i, j)_{(u_4, u_i)}$-path for each $i\in \{1, 2, 3\}$,
then $(u u_4, u v)\overset{\divideontimes}\to (j, 4)$.
Otherwise, we proceed with the following proposition:
\begin{proposition}
\label{a2a3a4}
\begin{itemize}
\parskip=0pt
\item[{\rm (1)}]
	$2\in B_1$; or $5/1/4, 6/1/3\in C(u_2)$,
    and for each $j\in T_2\cup ([5, 8]\setminus C(u_2)) = (T_3\cup [5, 8])\setminus C(u_3)$,
    $H$ contains a $(a_2, j)_{(u_2, u_{a_2})}$-path for some $a_2\in \{1, 3, 4\}\cap C(u_2)$;
\item[{\rm (2)}]
	$3\in B_1$; or $6/1/2\in C(u_3)$,
    and for each $j\in T_3\cup ([6, 8]\setminus C(u_3)) = (T_3\cup [6, 8])\setminus C(u_3)$,
    $H$ contains a $(a_3, j)_{(u_3, u_{a_3})}$-path for some $a_3\in \{1, 2, 4\}\cap C(u_3)$;
\item[{\rm (3)}]
	$4\in B_1$; or $5/1/2\in C(u_4)$,
    and for each $j\in T_4\cup (\{5, 7, 8\}\setminus C(u_4)) = (T_4\cup \{5, 7, 8\})\setminus C(u_4)$,
    $H$ contains a $(a_4, j)_{(u_4, u_{a_4})}$-path for some $a_4\in \{1, 2, 3\}\cap C(u_4)$.
\end{itemize}
\end{proposition}

If for some $j_1, j_2\in [2, 4]\setminus B_1$,
there exists a $l\in [(T_{j_1}\cup [5, 8])\setminus (C(u_{j_1})\cup C(u_1))]\setminus C(u_{j_2})$,
since $H$ contains a $(j_3, l)_{(u_{j_1}), u_{j_3}}$-path by \ref{a2a3a4},
then $(uu_{j_2}, uv)\overset{\divideontimes}\to (j_2, l)$.
Otherwise, we proceed with the following proposition:
\begin{proposition}
\label{T2T3T4}
For some $i, j\in [2, 4]\setminus B_1$, $[(T_i\cup [5, 8])\setminus (C(u_i)\cup C(u_1))]\subseteq C(u_j)$.
\end{proposition}

If for some $i\in [2, 4] \setminus C(u_1)$,
$1\not\in C(u_i)$, $H$ contains neither a $(1, j)_{(u_i, u_j)}$-path nor a $(i, j)_{(u_1, u_j)}$-path, for each $j\in [2, 4]\setminus \{i\}$,
then $(uu_i, uu_1, uv)\overset{\divideontimes}\to (1, i, 9)$.
For some $j_1, j_2\in [2, 4]\setminus C(u_1)$,
if $1\not\in C(j_1)\cup C(j_2)$ and $H$ contains no $(j_3, j)_{(u_{j_3}, u_1)}$-path for each $j\in \{j_1, j_2\}$,
then $(uu_{j_1}, uu_1, uv)\to (1, j_1, 9)$ or $(uu_{j_2}, uu_1, uv)\to (1, j_2 9)$, and we are done.
Otherwise, we proceed with the following proposition:
\begin{proposition}
\label{1a2a3a4}
\begin{itemize}
\parskip=0pt
\item[{\rm (1)}]
	For some $i\in [2, 4] \setminus C(u_1)$, $1\in C(u_i)$,
    or for some $j\in [2, 4]\setminus \{i\}$,
    $H$ contains a $(1, j)_{(u_i, u_j)}$-path or a $(i, j)_{(u_1, u_j)}$-path.
\item[{\rm (2)}]
	For some $j_1, j_2\in [2, 4]\setminus C(u_1)$,
    $1\in C(j_1)\cup C(j_2)$ or $H$ contains a $(j_3, j)_{(u_{j_3}, u_1)}$-path for some $j\in \{j_1, j_2\}$.
\end{itemize}
\end{proposition}

When $w_1 = 2$,
it follows from \ref{256u3u4} that $C(u_1) = [1, 2]\cup T_{u v}$,
and $5\in C(u_2)\cap C(u_4)$, $6\in C(u_2)\cap C(u_3)$.
One sees clearly that $3, 4\not\in B_1$,
and for each $i\in [3, 4]$, $(T_i\cup [7, 8])\setminus C(u_i)\ne \emptyset$.
It follows from \ref{a2a3a4} that for each $i\in [3, 4]$, there exists a $a_i\in C(u_)$,
and from \ref{T2T3T4} that $((T_3\cup [7, 8])\setminus C(u_3))\subseteq C(u_4)$.
Hence, $T_{u v}\cup [7, 8]\subseteq C(u_3)\cup C(u_4)$.
One sees that $|T_{u v}\cup [7, 8]| = \Delta - 1$, $\{3, 5, 6, a_3\}\subseteq C(u_3)$ and $\{4, 6, 5, a_4\}\subseteq C(u_4)$.
It follows from $d(u_3) + d(u_4)\ge \Delta - 1 + 8 = \Delta + 7$,
$d(u_3)\le 7$ and $d(u_4)\le 7$ that $d(u_3) = d(u_4) = \Delta = 7$.

Assuming w.l.o.g. $C(u_3) = \{3, 5, 6, a_3\}\cup \{\alpha_1, \alpha_2, \alpha_3\}$,
$C(u_4) = \{4, 6, 5, a_4\}\cup \{\beta_1, \beta_2, \beta_3\}$,
where $a_3\in \{2, 4\}$, $a_4\in \{2, 3\}$ and $\{\alpha_1, \alpha_2, \alpha_3, \beta_1, \beta_2, \beta_3\} = T_{u v}\cup [7, 8]$.
One sees clearly that $T_3\ne \emptyset$, $T_4\ne \emptyset$.
Since $|T_{u v}\cup [7, 8]| = 6$, assume w.l.o.g. $a_4 = 3$,
and $H$ contains a $(3, j)_{(u_4, u_3)}$-path for each $j\in \{\alpha_i\}$.
If $4\not\in C(u_2)$,
then $(u u_4, u u_1, u v)\overset{\divideontimes}\to (1, 4, 9)$.
If $1\not\in C(u_2)$,
then $(u u_2, u u_1)\to (1, 5)$ and $uv\overset{\divideontimes}\to T_3$.
Otherwise, $4, 1\in C(u_2)$.
Together with $2, 5, 6\in C(u_2)$ and $d(u_2)\le 7$, it follows that $a_3 = 4$.
If $3\not\in C(u_2)$,
then $(u u_3, u u_1, u v)\overset{\divideontimes}\to (1, 3, 9)$.
Otherwise, $3\in C(u_2)$.
Then $u u_2\to j\in T_2$ and $(u u_4, u v)\overset{\divideontimes}\to (2, 4)$.
One can clarify that $G$ contains no $(3, j)_{(u_2, u_3)}$-path even if $j\in C(u_3)$.

In the other case, $w_1 = 3$ and $d(u_1) = \Delta$.
One sees from $|T_{u v}\cup [7, 8]| = \Delta - 1$ that
for each $i\in [3, 4]$, $T_i\cup ([7, 8]\setminus C(u_i)) \ne \emptyset$,
and if $|[1, 6]\cap C(u_2)|\ge 2$, then $T_2\cup ([7, 8]\setminus C(u_2)) \ne \emptyset$.

Let $C(u_1) = \{1, x, y\}\cup T_{uv}$,
and
\[
S'_u = \biguplus_{i\in [2, 4]}(C(u_i)\setminus \{c(u u_i)\}).
\]

When $2, 3, 4\not\in B_1$,
it follows from \ref{a2a3a4}(1) that $5/1/4\in C(u_2)$,
$([7, 8]\cup T_{uv})\setminus C(u_2)\ne \emptyset$ and then $1/3/4\in C(u_2)$,
from \ref{a2a3a4}(2) that $1/2/4\in C(u_3)$,
from \ref{a2a3a4}(3) that there exists a $1/2/3\in C(u_4)$.
One sees from \ref{T2T3T4} that for each $i, j\in [2, 4]$,
$[(T_i\cup [5, 8])\setminus (C(u_i)\cup C(u_1))]\subseteq C(u_j)$.
If follows that for each $j\in T_{u v}\cup ([7, 8]\setminus C(u_1))$, mult$_{S'_u}(j)\ge 2$.
One sees that for each $j\in T_{u v}$,  mult$_{S_u}(j)\ge 3$,
and if $j\in [7, 8]\setminus C(u_1)$, then mult$_{S'_u}(j)\ge 2$ and mult$_{S_u}(j)\ge 2$.

If there exists an $j\in ([7, 8]\cap C(u_1))\setminus S'_u$,
since there exists an $i\in [2, 4]$ such that $H$ contains no $(1, j)_{u_1, u_i}$-path,
then $(u u_i, u v)\overset{\divideontimes}\to (j, i)$.
Otherwise, for some $j\in [7, 8]\cap C(u_1)$, $j\in S'(u)$, i.e., mult$_{S_u}(j)\ge 2$.
It follows from \ref{256u3u4} and \ref{a2a3a4} that if $5\not\in C(u_1)$, then $5\in C(u_2)\cap C(u_4)$,
and if $5\in C(u_1)$, then $5\in C(u_2)\cup C(u_4)$.
Hence, mult$_{S_u\setminus \{c(v u_3)\}}(5)\ge 2$ and likewise mult$_{S_u\setminus \{c(v u_3)\}}(6)\ge 2$.
It follows that $\sum_{i\in [1, 4]}d(u_i) \ge 3t_{uv} + 2|[5, 8]| + |\{1\}| + |\{2, 1/3/4\}| + |\{3, 5, 1/2/4\}| + |\{4, 6, 1/2/3\}|\ge 3(\Delta - 3) + 8 + 1 + 2 + 3 + 3 = 3\Delta + 8\ge 2\Delta + 6 + 8\ge 2\Delta + 14$.
Thus none of ($A_{8.1}$)--($A_{8.4}$) holds.

In the other case,
we proceed with the following proposition, or otherwise we are done:
\begin{proposition}
\label{234B1}
$\{2, 3, 4\}\cap B_1\ne \emptyset$, $2/3/4\in C(u_1)$ and if $C(u_1)\cap [2, 4] = \{a_1\}$, then $a_1\in B_1$.
\end{proposition}

Together with \ref{256u3u4}(5) and by symmetry, it suffices to consider the following five scenarios:
\begin{itemize}
\parskip=0pt
\item {\rm (i)}
    $\{x, y\} = \{3, 5\}$.
    One sees clearly that $2, 4\not\in B_1$, and from \ref{234B1} that $3\in B_1\cap C(v_1)$.
    If $2/6\in C(v_1)$, i.e., $C(v_1) = \{1, 3, 2/6\}\cup T_{u v}$,
    then $v v_1\to 4$ reduces the proof to (2.2.2.).
    Otherwise, $2, 6\not\in C(v_1)$.
    If $T_4 = \emptyset$, i.e., $C(u_4) = T_{u v}\cup \{2/3, 4, 6\}$,
    then $v u_4\to [2, 3]\setminus C(u_4)$ and $u v\overset{\divideontimes}\to 6$.
    Otherwise, $T_4\ne \emptyset$ and assume w.l.o.g. $9\in T_4$.

    \quad
    It follows from \ref{256u3u4}(3) that $6\in C(u_3)$, $H$ contains a $(3, 6)_{(u_1, u_3)}$-path,
    and from \ref{a2a3a4}(1) that $6\in C(u_2)$, there exists a $a_2\in \{3, 4\}\cap C(u_2)$,
    from \ref{a2a3a4}(3) that there exists a $a_4\in \{2, 3\}\cap C(u_4)$,
    and $5\in C(u_2)\cup C(u_4)$.
    Since from \ref{T2T3T4} that for each $i, j\in \{2, 4\}$, $T_i\cup ([7, 8]\cup C(u_i))\subseteq C(u_j)$,
    it follows that $T_{u v}\cup \{7, 8\}\subseteq C(u_2)\cup C(u_4)$.

    \quad If $1\not\in C(u_3)$ and $H$ contains no $(1, i)_{(u_3, u_i)}$-path for each $i\in \{2, 4\}$,
    then $(u u_3, u u_1, u v)\overset{\divideontimes}\to (1, 6, 9)$.
    If $7\not\in C(u_3)$ and $H$ contains no $(7, i)_{(u_3, u_i)}$-path for each $i\in \{2, 4\}$,
    then $(u u_1, u v)\to (7, 9)$ or $(u u_3, u u_1, u v)\to (7, 6, 9)$, and we are done.
    Otherwise, $1\in S_u$,
    and for each $i\in \{1, 7, 8\}$,
    $i\in C(u_3)$ or $H$ contains a $(i, j)_{(u_3, u_1)}$-path for some $j\in \{2, 4\}$.

    \quad One sees that if $7\not\in C(u_3)\cup C(u_4)$, then $H$ contains a $(2, 7)_{(u_2, u_3)}$-path and a $(2, 7)_{(u_2, u_4)}$-path, a contradiction.
    Hence, for some $i\in [7, 8]\setminus C(u_3)$, $i\in C(u_2)\cap C(u_4)$.

    \quad Together with \ref{1a2a3a4},
    it follows that if $1\not\in C(u_3)\cup C(u_4)$,
    then $H$ contains a $(1, 2)_{(u_3, u_2)}$-path and a $(3, 4)_{(u_1, u_3)}$-path, i.e., $2, 4\in C(u_3)$;
    if $1\not\in C(u_3)\cup C(u_2)$,
    then $H$ contains a $(1, 4)_{(u_3, u_4)}$-path and a $(3, 2)_{(u_1, u_3)}$-path, i.e., $2, 4\in C(u_3)$;
    and if $1\not\in C(u_2)\cup C(u_4)$, then $2, 4\in C(u_3)$,
    or for some $i\in [2, 4]\setminus C(u_3)$, $3\in C(u_i)$.

    \quad Let $\eta_{78} = |[7, 8]\setminus C(u_3)|$, $[7, 8]\cap C(u_3) = 2 - \eta_{78}$.
    Hence, $\sum_{i\in [7, 8]}$mult$_{S'_u}(i) = 2 + \eta_{78} + 2 - \eta_{78} = 4$,
    and $\sum_{i\in [1, 4]}d(u_i) = \Delta + \sum_{i\in [2, 4]}d(u_i)\ge \Delta + |\{2, 3/4, 6\}| + |\{3, 5, 2/4, 6\}| + |\{4, 6, 2/3\}| + |\{1\}| + \sum_{i\in T_{uv}\cup [5, 8]}$
	  mult$_{S'_u}(i) \ge \Delta + 11 + \sum_{i\in T_{uv}\cup \{5\}}$ mult$_{C(u_2)\cup C(u_4)}(i) + \sum_{i\in [7, 8]}$ mult$_{S'_u}(i) \ge \Delta + 11 + \Delta - 3 + 1 + 4 = 2\Delta + 13$.
    One sees clearly that if $3, 4\in C(u_2)$, or $2, 3\in C(u_4)$, or $2, 4\in C(u_3)$, or for some $j\in \{5\}\cup T_{uv}$, mult$_{S'_u}(j) \ge 2$,
    then $\sum_{i\in [1, 4]}d(u_i) \ge 2\Delta + 14$.

    \quad It follows that $\sum_{i\in [1, 4]}d(u_i) = 2\Delta + 13$,
    $\Delta = 7$, mult$_{S'_u}(1) = 1$,
    for each $j\in \{5\}\cup T_{uv}$, mult$_{S'_u}(j) = 1$,
    and $1\in C(u_3)\setminus (C(u_2)\cup C(u_3))$, $T_3 = T_{u v}$,
    $|\{3, 4\}\cap C(u_2)| = 1$,  $|\{2, 3\}\cap C(u_4)| = 1$,
    $|\{2, 4\}\cap C(u_3)| = 1$.
    Further, since $9\not\in C(u_3)$, $a_4 = 2$;
    since $5\cup T_2\ne\emptyset$, $a_3 = 4$.
    However, $3\not\in C(u_2)\cup C(u_4)$, a contradiction to \ref{1a2a3a4}(1).
\item {\rm (ii)}
    $\{x, y\} = \{3, 4\}$.
    It follows from \ref{256u3u4}(3) that $6\in C(u_3)$, $H$ contains a $(3, 6)_{(u_1, u_3)}$-path,
    and from \ref{a2a3a4}(1) that $6\in C(u_2)$, there exists a $a_2\in \{3, 4\}\cap C(u_2)$.
    It follows from \ref{256u3u4}(1) that $5\in C(u_4)$, $H$ contains a $(4, 5)_{(u_1, u_4)}$-path,
    and from \ref{a2a3a4}(1) that $5\in C(u_2)$.
    One sees from $2, 5, 6, 3/4\in C(u_2)$ that $T_2\ne\emptyset$.
    If $3, 4\in C(v_1)$, i.e., $C(v_1) = \{1, 3, 4\}\cup T_{uv}$,
    then $v v_1\to 2$ and $u v\overset{\divideontimes}\to T_2$.
    Otherwise, assume w.l.o.g. $4\not\in B_1$.
    It follows from \ref{a2a3a4}(3) that there exists a $a_4\in \{2, 3\}\cap C(u_4)$.
    Since from \ref{T2T3T4} that for each $i, j\in \{2, 4\}$, $T_i\cup ([7, 8]\cup C(u_i))\subseteq C(u_j)$,
    it follows that $T_{u v}\cup \{7, 8\}\subseteq C(u_2)\cup C(u_4)$.

    \quad Since $d(u_2) + d(u_4) \ge |\{2, 5, 6, 3/4\}| + |\{2/3, 4, 5, 6\}| + |T_{u v}\cup [7, 8]| = \Delta + 7$,
    $d(u_1) = \Delta$, and $d(u_4)\le 7$,
    it follows that $d(u_2) = d(u_4) = \Delta = 7$, $d(u_3)\le 6$,
    and $C(u_2) = \{2, 5, 6, 3/4, \beta_1, \beta_2, \beta_3\}$, $C(u_4) = \{2/3, 4, 5, 6, \alpha_1, \alpha_2, \alpha_3\}$, where $\{\alpha_i\}\cup \{\beta_j\} = [7, 8]\cup T_{uv}$.

    \quad One sees clearly that $T_4\ne \emptyset$ and assume w.l.o.g. $9\in T_4$.
    If $1\not\in C(u_3)$, then $(u u_3, u u_1, u v)\overset{\divideontimes}\to (1, 6, 9)$.
    Otherwise, $1\in C(u_3)$ and $T_3\ne \emptyset$.
    Since $\{\beta_1, \beta_2, \beta_3\}\setminus C(u_3)\ne \emptyset$,
    we have $a_4 = 2$ and $3\not\in C(u_3)$.
    Then $(u u_4, u u_1)\to (1, 5)$ and $uv\overset{\divideontimes}\to T_3$.
\item {\rm (iii)}
    $\{x, y\} = \{2, 7\}$.
    One sees that $|\{8\}\cup T_{u v}| = \Delta - 2$.
    It follows from \ref{256u3u4}(3) that
    $6\in C(u_2)$, $H$ contains a $(2, 6)_{(u_1, u_2)}$-path,
    and from \ref{a2a3a4}(2) that $6\in C(u_3)$, there exists a $a_3\in \{2, 4\}\cap C(u_3)$.
    It follows from \ref{256u3u4}(3) that
    $5\in C(u_2)$, $H$ contains a $(2, 5)_{(u_1, u_2)}$-path,
    and from \ref{a2a3a4}(3) that $5\in C(u_4)$, there exists a $a_4\in \{2, 3\}\cap C(u_4)$.
    For each $i, j\in \{3, 4\}$, by \ref{T2T3T4} (2), $T_i\cup (\{8\}\cup C(u_i))\subseteq C(u_j)$,
    and then $T_{u v}\cup \{8\}\subseteq C(u_3)\cup C(u_4)$;
    and if $1\not\in C(u_3)\cup C(u_4)$, $T_i\cup (\{7, 8\}\cup C(u_i))\subseteq C(u_j)$ and $T_{u v}\cup \{7, 8\}\subseteq C(u_3)\cup C(u_4)$.
    Hence, $d(u_3) + d(u_4)\ge |\{3, 5, 6, 2/4\}| + |\{4, 5, 6, 2/3\}| + |\{1\}| + |T_{u v}\cup \{8\}|= \Delta + 7$,
    or $d(u_3) + d(u_4) \ge |\{3, 5, 6, 2/4\}| + |\{4, 5, 6, 2/3\}| + |T_{u v}\cup \{7, 8\}| = \Delta + 7$.

    \quad It follows that $d(u_3) = d(u_4) = \Delta = 7$, $d(u_2)\le 6$,
    and for each $j\in T_{u v}\cup \{8\}$, mult$_{C(u_3)\cup C(u_4)}(j) = 1$,
    mult$_{C(u_3)\cup C(u_4)}(1) + $mult$_{C(u_3)\cup C(u_4)}(7) = 1$.

    \quad When $1\in C(u_3)\cup C(u_4)$,
    assume w.l.o.g. $C(u_3) = \{1, 3, 5, 6, 2/4, \alpha_1, \alpha_2\}$,
    $C(u_4) = \{4, 5, 6, 2/3, \beta_1, \beta_2, \beta_3\}$,
    where $\{\alpha_1, \alpha_2, \beta_1, \beta_2, \beta_3\} = [8, \Delta + 5]$.
    Since $7\not\in C(u_3)\cup C(u_4)$ and $1\not\in C(u_4)$,
    it follows from \ref{a2a3a4}(3) that $2\in C(u_4)$ and $C(u_2) = \{2\}\cup [5, 7]\cup \{\alpha_1, \alpha_2\}$.
    Then $(u u_4, u u_1, u v)\overset{\divideontimes}\to (1, 4, 9)$.
    In the other case, $1\not\in C(u_3)\cup C(u_4)$.

    \quad Assume w.l.o.g. $C(u_3) = \{3, 5, 6, 2/4, \alpha_1, \alpha_2, \alpha_3\}$,
    $C(u_4) = \{4, 5, 6, 2/3, \beta_1, \beta_2, \beta_3\}$,
    where $\{\alpha_1, \alpha_2, \alpha_3, \beta_1, \beta_2, \beta_3\} = [7, \Delta + 5]$.
    One sees that $T_3\ne \emptyset$, $T_4\ne\emptyset$.
    If $1\not\in C(u_2)$, then $(u u_2, u u_1)\to (1, 5)$ and $u v\overset{\divideontimes}\to T_3$.
    Otherwise, $1\in C(u_2)$.
    Since $\{\alpha_i\}\setminus C(u_2)\ne \emptyset$, $\{\beta_i\}\setminus C(u_2)\ne \emptyset$,
    it follows from \ref{a2a3a4} (2--3) that $a_4 = 3$, $a_3 = 4$,
    and from \ref{1a2a3a4}(1) that $3, 4\in C(u_2)$ and $C(u_2) = [1, 6]$.
    Then $u u_2\to T_4$, $(u u_3, u v)\overset{\divideontimes}\to (2, 3)$.

\item {\rm (iv)}
    $\{x, y\} = \{2, 5\}$.
    It follows from \ref{256u3u4}(3) that
    $6\in C(u_2)$, $H$ contains a $(2, 6)_{(u_1, u_2)}$-path,
    from \ref{a2a3a4}(2) that $6\in C(u_3)$, there exists a $a_3\in \{2, 4\}\cap C(u_3)$,
    and \ref{a2a3a4}(3) that there exists a $a_4\in \{2, 3\}\cap C(u_4)$ and $1/2/5\in C(u_4)$.
    For each $i, j\in \{3, 4\}$, by \ref{T2T3T4}, $T_i\cup ([7, 8]\cup C(u_i))\subseteq C(u_j)$,
    and then $T_{u v}\cup [7, 8]\subseteq C(u_3)\cup C(u_4)$.

    \quad It follows from \ref{234B1} that $2\in B_1\cap C(v_1)$.
    If $4, 5, 7, 8\not\in C(v_1)$,
    then $v v_1\to 4$ reduces to (2.2.2.);
    if $3, 6, 7, 8\not\in C(v_1)$,
    then $v v_1\to 3$ reduces to (2.2.2.).
    Otherwise, assume w.l.o.g. $7\in C(v_1)$, i.e., $C(v_1) = \{1, 2, 7\}\cup T_{u v}$,
    and $H$ contains a $(7, i)_{(v_1, v_7)}$-path for each $i\in \{3, 4\}$.

    \quad If $C(v_7) = \{7, 3, 4\}\cup T_{u v}$,
    then $(v v_1, v v_7)\to (3, 1)$ and $u v\overset{\divideontimes}\to T_3$.
    Otherwise, $R_7\ne \emptyset$.
    If $H$ contains no $(2, 7)_{(u_1, u_2)}$-path, then $u u_1\to 7$ and $uv\overset{\divideontimes}\to R_7$.
    Otherwise, $H$ contains a $(2, 7)_{(u_1, u_2)}$-path and $7\in C(u_2)$.
    It follows from \ref{a2a3a4} that (2) that if $7\not\in C(u_3)$,
    then $H$ contains a $(4, 7)_{(u_3, u_4)}$-path and $4\in C(u_3)$,
    and from \ref{a2a3a4} that (3) that if $7\not\in C(u_4)$,
    then $H$ contains a $(3, 7)_{(u_3, u_4)}$-path and $3\in C(u_4)$.

    \quad Since $2/3, 7\in C(u_4)$ or $3, 1/2/5\in C(u_4)$,
    $T_4\ne \emptyset$ and assume w.l.o.g. $9\in T_4$.
    If $1\not\in C(u_2)$ and $H$ contains no $(1, i)_{(u_2, u_i)}$-path for each $i\in [3, 4]$,
    then $(u u_2, u u_1, u v)\overset{\divideontimes}\to (1, 6, 9)$.
    Otherwise, $1/3/4\in C(u_2)$,
    and if $1\not\in C(u_2)$, then $H$ contains a $(1, i)_{(u_2, u_i)}$-path for some $i\in [3, 4]$.

    \quad When $2, 4\in C(u_3)$, or $2, 3\in C(u_4)$,
    or for some $j\in [7, 8]\cup T_{u v}$, mult$_{C(u_3)\cup C(u_4)}(j)\ge 2$,
    one see that $d(u_3) + d(u_4)\ge |\{3, 5, 6, 2/4\}| + |\{4, 6, 2/3\}| + \Delta - 1 + 1 = \Delta + 7$,
    it follows that $d(u_3)= d(u_4)= \Delta = 7$, $1, 5\not\in C(u_3)\cup C(u_4)$ and $d(u_2)\le 6$.
    Hence, $1, 5\in C(u_2)$, $2\in C(u_4)$,
    and it follows from \ref{1a2a3a4}(2) that $C(u_2) = [1, 2]\cup [5, 7]\cup \{a_2\}$ and $H$ contains a $(2, a_2)_{(u_1, u_2)}$-path, where $a_2\in \{3, 4\}$.
    If $H$ contains no $(8, a_2)_{(u_2, u_{a_2})}$-path,
    then $(u u_2, u u_1, u v)\overset{\divideontimes}\to (8, 6, 9)$.
    Otherwise, $H$ contains a $(8, a_2)_{(u_2, u_{a_2})}$-path and from \ref{a2a3a4}(2--3) that $8\in C(u_3)\cap C(u_4)$.
    It follows that $3\not\in C(u_4)$ and $7\in C(u_4)\setminus C(u_3)$.
    Recall that $H$ contains a $(2, 3)_{(u_1, u_2)}$-path even if $3\in C(u_2)$.
    Then $(u u_3, u u_4, u v)\overset{\divideontimes}\to (7, 3, 4)$.

    \quad In the other case, $|C(u_3)\cap \{2, 4\}| = 1$, $|C(u_4)\cap \{2, 3\}| = 1$,
    and for each $j\in [7, 8]\cup T_{u v}$, mult$_{C(u_3)\cup C(u_4)}(j) = 1$.
    One sees that if $4\in C(u_3)$,
    then for each $j\in T_{u v}$, $j\not\in C(u_4)$ or $H$ contains a $(4, j)_{(u_3, u_4)}$-path.
    If $2\not\in C(u_3)\cup C(u_4)$ and $T_2\ne\emptyset$,
    then $u u_2\to T_2$, and $(u u_3, u v)\overset{\divideontimes}\to (2, 3)$.
    Otherwise, $2\in C(u_3)\cup C(u_4)$ or $T_2 = \emptyset$.
    Hence, there are two scenarios.

    \quad Assume that $7\in C(u_3)\setminus C(u_4)$.
    Then $3, 1/5\in C(u_4)$.
    Since $d(u_3) + d(u_4)\ge |\{3, 5, 6, 2/4\}| + |\{4, 6, 3, 1/5\}| + \Delta - 1 = \Delta + 7$,
    it follows that $d(u_3)= d(u_4)= \Delta = 7$, and $d(u_2)\le 6$.
    Since $1/3/4\in C(u_2)$, $|T_3\cup [8]\setminus C(u_3)|\ge 3$, $a_3 = 4$.
    Then $2\not\in C(u_3)\cup C(u_4)$ and $T_2\ne \emptyset$.

    \quad Assume that $7\in C(u_4)\setminus C(u_3)$ and then $4\in C(u_3)$.
    One sees from \ref{a2a3a4}(3) that $5\in C(u_2)\cup C(u_4)$ or $1\in C(u_4)$.
    Since $d(u_2) + d(u_3) + d(u_4)\ge |\{3, 5, 6, 4\}| + |\{4, 6, 2/3\}| + \Delta - 1 + \{1/5, 2, 6, 7\}= \Delta + 10$,
    $T_2\ne \emptyset$ and then $2\in C(u_4)$.
    If $3\not\in C(u_2)$, then $(u u_3, u u_4, u v)\overset{\divideontimes}\to (7, 3, 4)$.
    Otherwise, $3\in C(u_2)$.
    It follows from $d(u_2) + d(u_4)\ge |\{2, 3, 6, 7\}| + |\{4, 6, 2, 7\}| + \Delta - 2 + \{1/5\}= \Delta + 7$
    that $d(u_2) = d(u_4) = \Delta = 7$, $1, 4\not\in C(u_2)$, and $1\in C(u_3)$.
    However, $d(u_3) + d(u_4)\ge |\{3, 5, 6, 4, 1\}| + |\{4, 6, 2\}| + \Delta - 1 = \Delta + 7$,
    and thus $d(u_3) = 7 > 6$.
\item {\rm (v)}
    $\{x, y\} = \{2, 3\}$.
    It follows from \ref{256u3u4}(1) that
    $5\in C(u_2)$, $H$ contains a $(2, 5)_{(u_1, u_2)}$-path,
    and from \ref{a2a3a4}(3) that $5\in C(u_4)$, there exists a $a_4\in \{2, 3\}\cap C(u_4)$.
    It follows from \ref{256u3u4}(3) that $H$ contains a $(i, 6)_{(u_1, u_i)}$-path for some $i\in [2, 3]$,
    and from \ref{256u3u4}(4) that if $6\not\in C(u_3)$, then $3\in B_1$;
    if $6\not\in C(u_2)$, then $2\in B_1$.
    If $2, 3\in C(v_1)$, then $v v_1\to 4$ reduces to (2.2.2.).
    Together with \ref{234B1}, $[2, 3]\cap C(v_1) = \{a\}$ and $a\in B_1$.

    \quad If $3\not\in C(v_1)$, and $H$ contains no $(3, i)_{(v_1, v)}$-path for each $i\in [6, 8]$,
    then $v v_1\to 3$ reduces the proof to (2.2.2.).
    If $4\not\in C(v_1)$, and $H$ contains no $(4, i)_{(v_1, v)}$-path for each $i\in \{5, 7, 8\}$,
    then $v v_1\to 4$ reduces the proof to (2.2.2.).
    If $2\not\in C(v_1)$, and $H$ contains no $(2, i)_{(v_1, v)}$-path for each $i\in [5, 8]$,
    then $v v_1\to 2$ reduces to (i) or (iv) if $w_2 = 3$,
    or if $w_2 = 2$, we can get an acyclic edge $(\Delta + 5)$-coloring for $H$ easily.
    Hence, by symmetry, there are two scenarios.

    \quad When $C(v_1) = \{1, 3, 5/7\}\cup T_{u v}$ and then $6\in C(u_2)$,
    it follows from \ref{a2a3a4} that there exists a $a_2\in \{3, 4\}\cap C(u_2)$,
    and from \ref{T2T3T4} that $T_{u v}\cup [7, 8]\subseteq C(u_2)\cup C(u_4)$.
    Since $d(u_2) + d(u_4)\ge \{2, 5, 6, 3/4\} + \{4, 5, 6, 2/3\} + \Delta - 1 = \Delta + 7$,
    it follows that $d(u_2) = d(u_4) = \Delta = 7$, and $d(u_3)\le 6$.
    Further, assume w.l.o.g. $C(u_2) = \{2, 5, 6, a_2\}\cup \{\beta_1, \beta_2, \beta_3\}$,
    $C(u_4) = \{4, 5, 6, a_4\}\cup \{\alpha_1, \alpha_2, \alpha_3\}$, where $\{\alpha_1, \alpha_2, \alpha_3, \beta_1, \beta_2, \beta_3\} = [7, 12]$.

    \quad If $H$ contains no $(1, 3)_{(u_2, u_3)}$-path,
    then $(u u_1, u u_2)\to (5, 1)$ and $u v\to (\{2\}\cup T_{u v})\setminus C(u_3)$.
    Otherwise, $H$ contains a $(1, 3)_{(u_2, u_3)}$-path.
    It follows that $a_2 = 3$, $C(u_3) = \{3, 5, 1\}\cup \{\beta_1, \beta_2, \beta_3\}$, and $a_4 = 2$.
    Then $(u u_2, u v)\to (4, 2)$ and $u u_4\overset{\divideontimes}\to T_4$.

    \quad In the other case, $C(v_1) = \{1, 2, 7\}\cup T_{uv}$ and $H$ contains a $(7, i)_{(v_1, v_7)}$-path,
    $6\in C(u_3)$.
    It follows from \ref{a2a3a4} that there exists a $a_3\in \{2, 4\}\cap C(u_3)$,
    and from \ref{T2T3T4} that $T_{u v}\cup [7, 8]\subseteq C(u_3)\cup C(u_4)$.
    Since $d(u_3) + d(u_4)\ge \{3, 5, 6, 2/4\} + \{4, 5, 6, 2/3\} + \Delta - 1 = \Delta + 7$,
    it follows that $d(u_3) = d(u_4) = \Delta = 7$, and $d(u_2)\le 6$.
    Further, assume w.l.o.g. $C(u_3) = \{3, 5, 6, a_3\}\cup \{\beta_1, \beta_2, \beta_3\}$,
    $C(u_4) = \{4, 5, 6, a_4\}\cup \{\alpha_1, \alpha_2, \alpha_3\}$, where $\{\alpha_1, \alpha_2, \alpha_3, \beta_1, \beta_2, \beta_3\} = [7, 12]$.
    If $C(v_7) = \{7, 3, 4\}\cup T_{uv}$,
    then $(v v_1, v v_7)\to (3, 1)$ and $uv\overset{\divideontimes}\to T_3$.
    Otherwise, $R_7\ne \emptyset$.

    \quad If $1\not\in C(u_2)$, then $(u u_1, u u_2)\to (5, 1)$ and $u v\to T_4$.
    Otherwise, $1\in C(u_2)$.
    If $a_3 = 2$, then $C(u_2) = \{2, 1, 5, \alpha_1, \alpha_2, \alpha_3\}$, $a_4 = 3$,
    and $(u u_4, u u_1, u v)\to (1, 4, 9)$.
    Otherwise, $a_3 = 4$.
    If $6\not\in C(u_2)$, then $(u u_3, u u_1)\to (1, 6)$ and $uv\to T_4$;
    if $7\not\in C(u_2)$, then $(u u_3, u u_1)\to (1, 7)$ and $uv\to R_7$,
    otherwise, $6, 7\in C(u_2)$ and then $a_4 = 3$.
    Recall that for each $j\in T_2$, $j\in C(u_4)$ or $H$ contains a $(4, j)_{(u_3, u_4)}$-path.
    Then $(u u_3, u v)\to (2, 3)$ and $u u_2\overset{\divideontimes}\to T_2$.
\end{itemize}

This finishes the inductive step for the case where $G$ contains the configuration ($A_8$),
and thus completes the proof of Lemma~\ref{auv1coloring}.
\end{proof}

By Lemmas~\ref{4to321}, \ref{3to21} and \ref{auv1coloring},
hereafter we assume that $a_{u v} = 2$.
It suffices to show that in $O(1)$ time an acyclic edge $(\Delta + 5)$-coloring for $G$ can be obtained,
or an acyclic edge $(\Delta + 5)$-coloring for $H$ can be obtained such that $|C(u)\cap C(v)| \le 1$.

\begin{corollary}
\label{auv2}
If $a_{u v} = 2$, i.e., $A_{u v} = \{i_1, i_2\}$ and $U_{u v} = \{i_3, \ldots, i_{k - 1}\}$,
then either an acyclic edge $(\Delta + 5)$-coloring for $H$ can be obtained such that $|C(u)\cap C(v)| \le 1$,
or the following holds:
\begin{itemize}
\parskip=0pt
\item[{\rm (1)}]
   for each $j\in T_{i_1}\ne \emptyset$, mult$_{S_u}(j)\ge 2$, $G$ contains a $(i_2, j)_{(u_{i_2}, v)}$-path, and
   \begin{itemize}
   \parskip=0pt
   \item[{\rm (1.1)}]
        if $d(u) = 4$, then $i_3\in C(u_{i_1})$, $G$ contains a $(i_3, j)_{(u_{i_1}, u_{i_3})}$-path,
        and $T_{i_1}\subseteq C(u_{i_3})$;
   \item[{\rm (1.2)}]
        if $d(u) = 5$, then $U_{u v} = \{i_3, i_4\}$, there exists a $a_{i_1}\in \{i_3, i_4\}\cap C(u_{i_1})$ such that
			$G$ contains a $(a_{i_1}, j)_{(u_{i_1}, u_{a_{i_1}})}$-path, $j\in C(u_{a_{i_1}})$, and $T_{i_1}\subseteq C(u_{i_3})\cup C(u_{i_4})$;
   \end{itemize}
\item[{\rm (2)}]
    $T_{i_1}\subseteq C(u_{i_2})$, $T_{i_2}\subseteq C(u_{i_1})$,
    and $T_{i_1}\cap T_{i_2} = \emptyset$.
\item[{\rm (3)}]
   For each $\alpha\in U_{u v}$,
   if there exists an $j\in T_{\alpha}$ such that $u u_{\alpha}\to j$ cannot obtain a bichromatic cycle,
   then $\alpha\in \bigcup_{i\in \{i_1, i_2\}}B_{i}$,
   mult$_{S_u}(\alpha)\ge 2$,
   and for each $i\in \{i_1, i_2\}$, $\alpha\in C(u_i)$, or there exists a $a_i\in C(u_i)\cap \{i_3, i_4\}$ such that $H$ contains a $(\alpha, a_i)_{(u_{\alpha}, u_i)}$-path.
\item[{\rm (4)}]
   For each $\beta\in V_{u v}$,
   if there exists an $j\in R_{\beta}$ such that $vv_\beta\to j$ cannot obtain a bichromatic cycle,
   then $\beta\in \bigcup_{i\in \{i_1, i_2\}}B_{i}$,
   mult$_{S_u}(\beta)\ge 2$,
   and for each $i\in \{i_1, i_2\}$, $\beta\in C(u_i)$, or there exists a $a_i\in C(u_i)\cap \{i_3, i_4\}$ such that $H$ contains a $(\beta, a_i)_{(u_{a_i}, u_i)}$-path.
\item[{\rm (5)}]
   for each $j\in T_{u v}$, if mult$_{S_u}(j) = 2$ with $j\in T_{i_1}\cap C(u_{i_2})\cap C(u_{i_3})$, and
   \begin{itemize}
   \parskip=0pt
   \item[{\rm (5.1)}]
        if $(T_{i_2}\cup \{i_3\})\setminus C(u_{i_2})\ne \emptyset$, then $i_1\in S_u$, $V_{u v}\subseteq S_u$,
        and {\rm (i)} $i_1\in C(u_{i_3})$ or there exists an $i\in C(u)$ such that $H$ contains a $(i, i_1)_{(u_{i_3}, u_i)}$-path; {\rm (ii)} for each $l\in V_{uv}$, $l\in C(u_{i_1})\cup C(u_{i_3})$,
        or there exists an $i\in C(u)$ such that $H$ contains a $(i, l)_{(u_{i_1}, u_i)}$-path,
        or there exists an $j\in C(u)$ such that $H$ contains a $(j, l)_{(u_{i_3}, u_j)}$-path;
   \item[{\rm (5.2)}]
        if $d(u) = 5$ with $U_{u v} = \{i_4\}$,
        then $H$ contains no $(j, i)_{(u_{i_4}, u_i)}$-path for each $i\in C(u)$, mult$_{S_u}(i_4)\ge 2$,
        and $H$ contains a $(i_3, l)_{(u_{i_1}, u_{i_3})}$-path for each $l\in T_{i_1}$, $T_{i_1}\subseteq C(u_{i_3})$.
   \end{itemize}
\end{itemize}
\end{corollary}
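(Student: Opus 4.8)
The plan is to treat this exactly as the analogue of Corollary~\ref{auv3}, now specialised to $a_{uv}=2$ with $A_{uv}=\{i_1,i_2\}$, so that the bulk of the work is quoting Lemma~\ref{auvge2} and extracting refinements from the maximal bichromatic path lemma (Lemma~\ref{lemma06}). Throughout I adopt the standing convention that the moment any recolouring of edges incident to $u$ or $v$ yields an acyclic edge $(\Delta+5)$-colouring of $H$ with $|C(u)\cap C(v)|\le 1$, we are in the first (favourable) alternative and stop; hence at every branch I may assume no such recolouring exists, and it is precisely these assumptions that force the asserted structural conclusions.

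First I would dispatch (1) and (2). For $j\in T_{i_1}\ne\emptyset$, Proposition~\ref{prop3209}(2) forces $j\in B_{i_2}$, giving the $(i_2,j)_{(u_{i_2},v)}$-path, while Lemma~\ref{auvge2}(1.1) supplies a $(a_{i_1},j)_{(u_{i_1},u_{a_{i_1}})}$-path with $a_{i_1}\in C_{i_1}\setminus\{i_1\}$; by Lemma~\ref{lemma06} this colour cannot be $i_2$, so $a_{i_1}\in U_{uv}$, which is $\{i_3\}$ when $d(u)=4$ and $\{i_3,i_4\}$ when $d(u)=5$, establishing (1.1) and (1.2). The inclusion $T_{i_1}\subseteq C(u_{i_3})$ (resp.\ $T_{i_1}\subseteq C(u_{i_3})\cup C(u_{i_4})$) is read off because each such $j$ lies in $C(u_{a_{i_1}})$ for its witness $a_{i_1}\in U_{uv}$, and $\mathrm{mult}_{S_u}(j)\ge 2$ follows from the two distinct edges at $u$ (coloured $i_2$ and $a_{i_1}$) whose far endpoints carry $j$. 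For (2), Proposition~\ref{prop3209}(4) gives $T_{i_1}\subseteq\bigcup_{l\in A_{uv}\setminus\{i_1\}}C(u_l)=C(u_{i_2})$ and symmetrically $T_{i_2}\subseteq C(u_{i_1})$; disjointness is then immediate, since $T_{i_1}\subseteq C(u_{i_2})$ forces $T_{i_1}\cap(T_{uv}\setminus C(u_{i_2}))=\emptyset$.

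Next I would obtain (3) and (4) as essentially verbatim instances of Lemma~\ref{auvge2}(3) and (4) with $A_{uv}=\{i_1,i_2\}$; the only addition is that the witnessing neighbour-colour $a_i$ may be taken in $C(u_i)\cap\{i_3,i_4\}$, again because a $(\alpha,a_i)$-path together with Lemma~\ref{lemma06} forbids $a_i\in A_{uv}$. The substance is (5), where $\mathrm{mult}_{S_u}(j)=2$ with $j\in T_{i_1}\cap C(u_{i_2})\cap C(u_{i_3})$. Here I would run a sequence of reduction attempts: recolouring $uu_{i_2}$ onto a colour of $(T_{i_2}\cup\{i_3\})\setminus C(u_{i_2})$ would erase $i_2$ from $C(u)$ and drop $a_{uv}$ to $1$ unless blocked by a bichromatic cycle, and likewise recolouring $uu_{i_1}\to j$, or moving a colour of $V_{uv}\setminus S_u$ onto $u$. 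The failure of each of these (which we may assume by the standing convention) yields $i_1\in S_u$, $V_{uv}\subseteq S_u$, and the two path alternatives in (5.1)(i)--(ii), each obtained by feeding the blocked recolouring into Lemma~\ref{lemma06}. For (5.2), recolouring $uu_{i_4}\to j$ (legitimate since $j\in T_{i_4}$, as the two witnesses of $\mathrm{mult}_{S_u}(j)=2$ lie only in $C(u_{i_2})$ and $C(u_{i_3})$) keeps $a_{uv}=2$ but frees $i_4$, so applying Lemma~\ref{auvge2}(3) to the resulting colouring delivers $\mathrm{mult}_{S_u}(i_4)\ge 2$ together with the $(i_3,l)_{(u_{i_1},u_{i_3})}$-paths and $T_{i_1}\subseteq C(u_{i_3})$.

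The hard part will be (5): keeping track, across the several competing recolourings, of exactly which bichromatic cycle blocks each move and of the induced path in $C(u_{i_1})$ versus $C(u_{i_3})$ that appears in (5.1)(ii), and reconciling the index labelling in (5.2) (for $d(u)=5$ one has $U_{uv}=\{i_3,i_4\}$ of size two, with $i_4$ playing the role that the single free colour played in the $a_{uv}=3$ analogue of Corollary~\ref{auv3}(6.2)). Special care is needed to verify that recolouring $uu_{i_4}\to j$ genuinely preserves acyclicity before invoking Lemma~\ref{auvge2} on the new colouring to conclude $\mathrm{mult}_{S_u}(i_4)\ge 2$.
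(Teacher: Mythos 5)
Your treatment of (1)--(4) matches the paper: these are read off from Lemma~\ref{auvge2}(1.1) and (3)--(4), with Lemma~\ref{lemma06} forcing the witness colour $a_{i_1}$ into $U_{uv}$ exactly as you say. The problem is (5), which is where the real content of this corollary lies, and there your recolouring scheme does not produce the stated conclusions. For (5.1) the paper's move is the \emph{simultaneous} swap $(uu_{i_1},uu_{i_3})\to(j,i_1)$ followed by $uv\to(T_{i_2}\cup\{i_3\})\setminus C(u_{i_2})$; its failure is what forces $i_1\in C(u_{i_3})$ or an $(i,i_1)_{(u_{i_3},u_i)}$-path, i.e.\ conclusion (i), and repeating it after $uu_{i_1}\to l$ gives (ii) and $V_{uv}\subseteq S_u$. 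Your proposed single-edge moves cannot substitute for this: recolouring $uu_{i_2}$ interrogates the palette around $u_{i_2}$ and says nothing about whether $i_1$ sits near $u_{i_3}$; and recolouring $uu_{i_1}\to j$ \emph{alone} is always blocked, because in the situation of (5) the witness of part (1) is forced to be $a_{i_1}=i_3$ (the only two occurrences of $j$ in $S_u$ are at $u_{i_2}$ and $u_{i_3}$), so $H$ already contains an $(i_3,j)_{(u_{i_1},u_{i_3})}$-path and the new colouring has an $(i_3,j)$-bichromatic cycle through $u$. A vacuously blocked move yields no information, so your plan extracts neither $i_1\in S_u$ nor the disjunctions in (5.1)(i)--(ii). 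You need the second edge $uu_{i_3}$ recoloured to $i_1$ in the same step precisely to break that forced path.

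In (5.2) the first half of your argument is sound and essentially the paper's: Lemma~\ref{lemma06} rules out any $(j,i)_{(u_{i_4},u_i)}$-path, so $uu_{i_4}\to j$ is a legal acyclic recolouring with $a_{uv}$ still equal to $2$, and feeding it into Lemma~\ref{auvge2}(3) gives $\mathrm{mult}_{S_u}(i_4)\ge 2$. But Lemma~\ref{auvge2}(3) does not ``deliver the $(i_3,l)_{(u_{i_1},u_{i_3})}$-paths and $T_{i_1}\subseteq C(u_{i_3})$'' as you claim; that last clause of (5.2) comes from a separate reduction, namely $(uu_{i_1},uu_{i_4})\to(l,j)$ for $l\in T_{i_1}\setminus\{j\}$, whose success would leave $C(u)\cap C(v)=\{i_2\}$ and whose failure forces the $(i_3,l)$-path at $u_{i_1}$. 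So both halves of (5) are missing the same idea: the productive reductions here are two-edge swaps at $u$, not single recolourings.
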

\begin{proof}
It follows from Lemma~\ref{auvge2}(1.1) that (1--2) holds,
and from Lemma~\ref{auvge2}(3--4) that (3--4) hold.

For (5), assume that $T_{i_2}\cup (\{i_3\}\setminus C(u_{i_2}))\ne \emptyset$.
If $i_1\not\in C(u_{i_3})$ and $H$ contains no $(i, i_1)_{(u_{i_3}, u_i)}$-path for each $i\in C(u)$,
then $(u u_{i_1}, u u_{i_3})\to (j, i_1)$ and $u v\to (T_2\cup \{i_3\})\setminus C(u_2)$.
Otherwise, $i_1\in C(u_{i_3})$ or $H$ contains a $(i, i_1)_{(u_{i_3}, u_i)}$-path for some $i\in C(u)$.
It follows that $i_1\in S_u$.

For each $l\in V_{uv}$, if $l\not\in C(u_{i_1})\cup C(u_{i_3})$,
and $H$ contains no $(i, l)_{(u_{i_1}, u_i)}$-path for each $i\in C(u)$,
one sees that the same argument applies if $uu_{i_1}\to l$ and it follows that there exists an $j\in C(u)$ such that $H$ contains a $(j, l)_{(u_{i_3}, u_j)}$-path.
Otherwise, $l\in C(u_{i_1})\cup C(u_{i_3})$,
or there exists an $i\in C(u)$ such that $H$ contains a $(i, l)_{(u_{i_1}, u_i)}$-path,
or there exists an $j\in C(u)$ such that $H$ contains a $(j, l)_{(u_{i_3}, u_j)}$-path,
and it follows that $l\in S_u$ and then $V_{uv}\subseteq S_u$.
Hence, (5.1) holds.

Assume $d(u) = 5$ with $U_{u v} = \{i_4\}$.
Since $H$ contains a $(i_2, j)_{(u_{i_2}, v)}$-path and a $(i_3, j)_{(u_{i_1}, u_{i_3})}$-path,
by Lemma~\ref{lemma06}, $H$ contains no $(j, i)_{(u_{i_4}, u_i)}$-path for each $i\in C(u)$.
One sees that same argument to show Lemma~\ref{auvge2}(2) applies if $uu_{i_4}\to j$.
Hence, mult$_{S_u}(i_4)\ge 2$.
If $H$ contains no $(i_3, l)_{(u_{i_1}, u_{i_2})}$-path for some $l\in T_{i_1}\setminus \{j\}$,
then $(u_{i_1}, u_{i_4})\to (l, j)$ reduces to an acyclic edge $(\Delta + 5)$-coloring for $H$ such that $C(u)\cap C(v) = \{i_2\}$.
Hence, (5.2) holds.
\end{proof}

Recall that Eq.(\ref{eq17})
\begin{equation*}
\begin{aligned}
\sum_{i\in [1, k - 1]}d(u_i) = \sum_{i\in [1, k - 1]}w_i + \sum_{j\in T_{uv}}mult_{S_u}(j)\ge \sum_{i\in [1, k - 1]}w_i + 2t_{uv} + \sum_{j\in T_0}(mult_{S_u}(j) - 2).
\end{aligned}
\end{equation*}

Let $c(u u_i) = i$, $i\in C(u) = [1, d(u) - 1]$, $c(v v_j) = j$, $j\in \{i_1, i_2\}\cup [d(u), d(u) + d(v) - 4]$,
where $A_{u v} = \{i_1, i_2\}$.
One sees clearly that $t_{u v} = \Delta + 5 - (d(u) + d(v) - 4) = \Delta - (d(u) + d(v)) + 9$.

We discuss the following two cases for $d(u) = 4$ and $d(u) = 5$, respectively.

\subsubsection{$d(u) = 4$ and $c(u u_i) = i\in [1, 3]$, $c(v v_j) = j\in \{i_1, i_2\}\cup [4, d(v)]$, where $i_1, i_2\in [1, 3]$.}


In this case, $5\le d(v)\le 7$,
and $t_{u v} = \Delta - (d(u) + d(v)) + 9 = \Delta + 5 - d(v)$.

{\bf Case 1.}  $d(v)\le 5$ and $t_{u v} = \Delta + 5 - d(v)\ge \Delta$ that $T_{i_1}\ne \emptyset$, $T_{i_2}\ne \emptyset$.

It follows from Corollary~\ref{auv2} (6.1) that $i_3\in C(u_{i_1})\cap C(u_{i_2})$, ${i_1}, {i_2}\in S_u$,
and $V_{u v} = [4, d(v)]\subseteq S_u$,
thus $\sum_{x\in N(u)}d(x) \ge 2t_{uv} + $mult$_{S_u}(i_3) + |C(u)| + |[i_1, i_2]\cup [4, d(v)]| + d(v) = 2(\Delta + 5 - d(v)) + 2 + 3 + d(v) -1 + d(v) = 2\Delta + 14$.
Hence, ($A_{6.3}$) holds where $d(v) = 5$, $\sum_{x\in N(u)}d(x) = 2\Delta + 14$,
and for each $i\in \{i_1, i_2\}\cup \{4, 5\}$, mult$_{S_u}(i) = 1$.
By symmetry, we assume w.l.o.g. that $\{i_1, i_2\} = \{2, 3\}$ or $\{i_1, i_2\} = \{1, 2\}$.
Note that $c(v u_2)\not\in A_{u v}$.
Assume that $c(v u_2) = 4$ and then $4\not\in C(u_1)\cup C(u_3)$.
It follows from Corollary~\ref{auv2}(4) that $5\in C(u_2)$ and then $5\not\in C(u_1)\cup C(u_3)$.
One sees that if $\{i_1, i_2\} = \{2, 3\}$, then $(v u_2, v u_3)_c = (4, 5)$.
Hence, it suffices to assume that $\{i_1, i_2\} = \{1, 2\}$ and $c(v u_3)\in A_{u v}$.
It follows from Lemma~\ref{auvge2}(1.2) that there exists a $a_i\in \{4, 5\}\cap C(u_3)$.
Thus, mult$_{S_u}(a_i) \ge 2$, a contradiction.


{\bf Case 2.} $d(v) = 6$.
Then $t_{u v} = \Delta - 1$, $V_{u v} = [4, 6]$.

Since $u v$ is contained in a $3$-cycle, assume that $v u_2\in E(G)$.
One sees that if $2\in A_{u v}$, then $c(v u_2)\not\in A_{u v}$.
We will first show that $T_{i_1}\ne\emptyset$, $T_{i_2}\ne \emptyset$,
$\sum_{x\in N(u)}d(x) = 2\Delta + 14$,
and for each $i\in \{i_1, i_2\}\cup [4, 6]$, mult$_{S_u}(i) = 1$.
By symmetry, assume w.l.o.g.$\{i_1, i_2\} = \{1, 2\}$ or $\{i_1, i_2\} = \{1, 3\}$.
\begin{itemize}
\parskip=0pt
\item
    $\{i_1, i_2\} = \{1, 2\}$. Assume w.l.o.g. $c(v u_2) = 4$.
    One sees that $T_2\ne \emptyset$.
    When $C(u_1) = \{1\}\cup T_{u v}$, $u u_1\to 4$ and $u v\to T_2$.
    In the other case, $T_1\ne\emptyset$.
\item
    $\{i_1, i_2\} = \{1, 3\}$ and $c(v u_2) = 1$.
    One sees that $T_2\ne \emptyset$ and $H$ contains a $(3, j)_{(u_3, v_3)}$-path.
    When $C(u_i) = \{i\}\cup T_{u v}$ for some $i\in \{1, 3\}$, $u u_i\to 2$ and $u u_2\to T_2$ reduces to an acyclic edge $(\Delta + 5)$-coloring for $H$ such that $|C(u)\cap C(v)| = 1$.
    In the other case, $T_1\ne\emptyset$ and $T_3\ne \emptyset$.
\item
    $\{i_1, i_2\} = \{1, 3\}$ and $c(v u_2) = 4$.
    When $C(u_i) = \{i\}\cup T_{u v}$ for some $i\in \{1, 3\}$, $u u_i\to 4$ reduces to the above case where $c(v u_2)\in A_{u v}$.
    In the other case, $T_1\ne\emptyset$ and $T_3\ne \emptyset$.
\end{itemize}

It follows from Corollary~\ref{auv2} (6.1) that $i_3\in C(u_{i_1})\cap C(u_{i_2})$, $i_1, i_2, 4, 5, 6\in S_u$,
and $\sum_{x\in N(u)}d(x) \ge 2t_{u v} + $mult$_{S_u}(i_3) + |C(u)| + |\{i_1, i_2, 4, 5, 6\}| + d(v) = 2(\Delta - 1) + 2 + 3 + 5 + 6 = 2\Delta + 14$.
Hence, ($A_{6.3}$) holds where $d(v) = 6$, $\sum_{x\in N(u)}d(x) = 2\Delta + 14$,
and for each $i\in \{i_1, i_2\}\cup [4, 6]$, mult$_{S_u}(i) = 1$.

By symmetry, we assume w.l.o.g. that $\{i_1, i_2\} = \{2, 3\}$ or $\{i_1, i_2\} = \{1, 2\}$.
Since $c(v u_2)\not\in A_{u v}$, assume w.l.o.g. $c(v u_2) = 4$.
Then $4\not\in C(u_1)\cup C(u_3)$.
It follows from Corollary~\ref{auv2}(4) that there exists a $a_i\in \{5, 6\}\cap C(u_2)$, say $a_i = 6$,
and then $6\not\in C(u_1)\cup C(u_3)$.

When $c(v u_3) = 5$, since $4, 6\not\in C(u_3)$ and by Corollary (4),
we can get an acyclic edge $(\Delta + 5)$-coloring for $H$.
In the other case, $c(v u_3) \ne 5$.
One sees that if $\{i_1, i_2\} = \{2, 3\}$, then $c(v u_3) = 5$.
Hence, it suffices to assume that $\{i_1, i_2\} = \{1, 2\}$ and $c(v u_3)\in A_{u v}$.

Recall that $4\not\in C(u_1)\cup C(u_3)$.
Let $7\not\in C(u_2)$.
If $H$ contains no $(2, 4)_{(u_1, v)}$-path,
then $(u u_1, u v)\overset{\divideontimes}\to (4, 7)$.
Otherwise, $H$ contains a $(2, 4)_{(u_1, v)}$-path and then $c(v u_3) = 1$.
Then $(u u_1, u u_2, u v)\overset{\divideontimes}\to (4, 1, 7)$.


{\bf Case 3.} $v$ is $(7, 4^-)$, $u_1 u_2$, $u_1 u_3$, $v u_2$,
$v u_3\in E(G)$ and $\min \{d(u_1), d(u_2), d(u_3)\}\le 8$.

Then $t_{u v} = \Delta - 2\ge 5$, $V_{u v} = [4, 7]$.
It follows from \ref{prop3001} that $T_{u v}\subseteq B_{i_1}\cup B_{i_2}$,
and from Lemma~\ref{auvge2}(2) that for each $j\in T_{u v}$, mult$_{S_u}(j)\ge 2$.
One sees that for each $i\in \{i_1, i_2\}$, if $d(v_i)\le 5$, then $R_i\ne \emptyset$.
By symmetry, we discuss the following seven subcases, respectively.

(3.1.) $\{i_1, i_2\} = \{1, 2\}$ and $(v u_2, v u_3)_c = (1, 2)$.
It follows that $C(u_2) = [1, 2]\cup T_{u v}$.
If there exists a $j\in T_3$, then $u u_3\to T_3$;
otherwise, $C(u_3) = [2, 3]\cup T_{u v}$ and $u u_3\to 4$.
Then $u v\overset{\divideontimes}\to 3$.

(3.2.) $\{i_1, i_2\} = \{2, 3\}$ and $(v u_2, v u_3)_c = (3, 4)$.
It follows that $C(u_2) = [2, 3]\cup T_{u v}$, $c(u_1 u_2) = 8$ with $2/3\in C(u_1)$.
If $4\not\in C(u_1)$ or $1\not\in C(u_3)$, then $(u u_2, v u_2)\to (4, 1)$ reduces the proof to (1.3.1.).
Otherwise, $4\in C(u_1)$ and $1\in C(u_3)$.
One sees clearly that $T_3\ne \emptyset$ with $9\in T_3$ and $T_1\ne \emptyset$.
It follows from Corollary~\ref{auv2} (1.1) that $8\in C(u_3)$ and $H$ contains a $(1, 9)_{(u_1, u_3)}$-path.

If $3\not\in C(u_1)$, then $u u_1\to T_3$ and $u u_2\to 1$ reduces to an acyclic edge $(\Delta + 5)$-coloring for $H$ such that $a_{u v} = 1$.
If there exists an $i\in [5, 7]\setminus S_u$,
then $(u u_1, u v)\overset{\divideontimes}\to (i, 1)$.
Otherwise, $3\in C(u_1)$ and $[5, 7]\subseteq S_u$.
It follows that $d(u_1) + d(u_3)\ge 2\times |\{1, 3, 4, 8\}| + |[9, \Delta + 5]| + |[5, 7]| = 8 + \Delta - 3 + 3 =\Delta + 8$ and then $d(u_1) + d(u_3) = 8$.
One sees that $2\not\in S_u$.
Then $(u u_3, u u_2, v u_2, u v)\overset{\divideontimes} \to (2, 3, 1, 9)$.

(3.3.) $\{i_1, i_2\} = \{1, 2\}$ and $(v u_2, v u_3)_c = (1, 4)$.
It follows that $C(u_2) = [1, 2]\cup T_{u v}$ and $c(u_1 u_2) = 8$.
If $4\not\in C(u_1)$ or $1\not\in C(u_3)$, then $u u_2\to 4$ reduces the proof to (3.1.).
Otherwise, $4\in C(u_1)$ and $1\in C(u_3)$.
If $2\not\in C(u_3)$, then $v u_2\to 3$ reduces the proof to (3.2.).
Otherwise, $2\in C(u_3)$.
One sees clearly that $T_3\ne \emptyset$ with $9\in T_3$.
If $3\not\in C(u_1)$, then $(u u_3, u u_1, u u_2)\to (9, 3, 4)$ reduces to an acyclic edge $(\Delta + 5)$-coloring for $H$ such that $a_{u v} = 1$.
If there exists an $i\in [5, 7]\setminus S_u$, then $(u u_3, u v)\overset{\divideontimes}\to (i, 3)$.
Otherwise, $3\in C(u_1)$ and $[5, 7]\subseteq S_u$.

It follows that $d(u_1) + d(u_3)\ge |\{1, 3, 4, 8\}| + |[1, 4]| + |[9, \Delta + 5]| + |[5, 7]| = 8 + \Delta - 3 + 3 =\Delta + 8$ and then $d(u_1) + d(u_3) = 8$.
One sees that $2\not\in C(u_1)$ and $8\not\in C(u_3)$.
Then $(u u_3, u v)\overset{\divideontimes} \to (8, 3)$.

(3.4.) $\{i_1, i_2\} = \{1, 2\}$ and $(v u_2, v u_3)_c = (4, 1)$.
If $C(u_3) = \{1, 3\}\cup T_{u v}$,
then $(v u_3, u u_3)\to (3, 4)$ reduces the proof to (3.2.).
Otherwise, $T_3\ne \emptyset$ and assume w.l.o.g. $8\in T_3$.
One sees that $H$ contains a $(2, 8)_{(u_2, v_2)}$-path.
If $H$ contains no $(2, 3)_{(u_2, v_2)}$-path,
then $(u u_3, u v)\overset{\divideontimes}\to (8, 3)$.
Otherwise, $H$ contains a $(2, 3)_{(u_2, v_2)}$-path.
It follows that $3\in C(u_2)$, $T_2\ne\emptyset$, and assume w.l.o.g. $9\in T_2$.

If $3\not\in C(u_1)$, then $(u u_3, u u_1)\to (8, 3)$ reduces to an acyclic edge $(\Delta + 5)$-coloring for $H$ such that $a_{u v} = 1$;
if $2\not\in S_u$, then $(u u_3, u u_2, u v)\overset{\divideontimes}\to (2, 9, 3)$.
Otherwise, $3\in C(u_1)$ and $2\in S_u$.
One sees that if there exists an $i\in [5, 7]\setminus S_u$,
then similar argument applies if $u u_2\to i$.
Hence, $[5, 7]\subseteq S_u$.

If $4\not\in C(u_1)$ and $H$ contains no $(4, i)_{(u_1, u_i)}$-path for each $i\in [2, 3]$,
then $u u_1\to 4$ reduces the proof to (3.1.).
If $5\not\in C(u_3)$ and $H$ contains no $(4, i)_{(u_3, i_i)}$-path for each $i\in [1, 2]$,
then $(u u_3, u v)\overset{\divideontimes}\to (4, 3)$.
Otherwise, $4\in C(u_1)$ or $H$ contains a $(4, i)_{(u_1, u_i)}$-path for some $i\in [2, 3]$,
and $4\in C(u_3)$ and $H$ contains a $(4, i)_{(u_3, i_i)}$-path for some $i\in [1, 2]$.
It follows that $4\in C(u_1)\cup C(u_3)$.

Since $d(u_1) + d(u_2) + d(u_3)\ge |\{1, 3\}| + |[2, 4]| + |\{1, 3\}| + 2\times |[8, \Delta + 5]| + |\{2\}\cup [4, 7]\}| = 7 + 2(\Delta - 2) + 5 = 2\Delta + 8$,
we have $d(u_1) + d(u_2) + d(u_3) = 2\Delta + 8$, $1\not\in C(u_2)$,
for each $i\in \{2\}\cup [4, 7]\}$, mult$_{S_u}(i) = 1$,
and for each $j\in T_{u v}$, mult$_{S_u}(j) = 2$.
It follows that $c(u_1 u_2), c(u_1 u_3)\not\in \{2\}\cup [4, 7]\}$.
Assuming w.l.o.g. $(u_1 u_2, u_1 u_3)_c = \{\alpha, \beta\}\subseteq T_{uv}$.

Since mult$_{S_u}(\alpha) = 2$, $\alpha\in T_3$.
Then $(u u_3, u u_1, u u_2)\to (\alpha, 4, 1)$ reduces the proof to (1.3.1.).

(3.5.) $\{i_1, i_2\} = \{1, 2\}$ and $(vu_2, vu_3)_c = (4, 2)$.
When $H$ contains no $(1, 3)_{(u_1, v_1)}$-path,
if there exists a $j\in T_3$, then $u u_3\overset{\divideontimes}\to j$.
otherwise, $C(u_3) = [2, 3]\cup T_{u v}$;
if there exists an $i\in [5, 7]$, then $u u_3\to i$.
otherwise, $[5, 7]\subseteq C(u_2)$, $T_2\ne\emptyset$, and $u u_3\to 4$, $u u_2\to T_2$.
Then $u v\overset{\divideontimes}\to 3$.
In the other case, $H$ contains a $(1, 3)_{(u_1, v_1)}$-path and $3\in C(u_1)$.

When $3\not\in C(u_2)$, it follows that $C(u_2) = \{2, 4\}\cup T_{u v}$,
assume w.l.o.g. $c(u_1 u_2) = 8$.
Since $8\in B_1\cup B_2$, $2\in C(u_1)$ and then $T_3\ne\emptyset$.
If there exists a $j\in T_3$, then $u u_2\to 3$ and $u u_3\to j$ reduces to an acyclic edge $(\Delta + 5)$-coloring for $H$ such that $a_{u v} = 1$.
Otherwise, $C(u_3) = [2, 3]\cup T_{u v}$.
Then $u u_3\to 1$ and $u u_1\to T_1$ reduces the proof to (3.1.).
In the other case, $3\in C(u_2)$, $T_2\ne\emptyset$ and assume w.l.o.g. $8\in T_2$.
One sees that $H$ contains a $(1, 8)_{(u_1, v_1)}$-path.

When $1\not\in S_u$, if there exists a $j\in T_1$, then $(u u_3, u u_1)\to (1, j)$ reduces the proof to (3.1.);
otherwise, $C(u_1) = \{1, 3\}\cup T_{u v}$, then $(u u_3, u u_1, u v)\to (1, 4, 8)$.
In the other case, $1\in S_u$.
One sees that if there exists an $i\in [5, 7]\setminus S_u$,
then similar argument applies if $u u_1\to i$.
Hence, $[5, 7]\subseteq S_u$.

If $4\not\in C(u_1)\cup C(u_3)$, then $u u_1\to 4$ reduces the proof to (3.1.).
Otherwise, $4\in C(u_1)\cup C(u_3)$.

Since $d(u_1) + d(u_2) + d(u_3)\ge |\{1, 3\}| + |[2, 4]| + |\{2, 3\}| + 2\times |[8, \Delta + 5]| + |\{1\}\cup [4, 7]\}| = 7 + 2(\Delta - 2) + 5 = 2\Delta + 8$,
we have $d(u_1) + d(u_2) + d(u_3) = 2\Delta + 8$, $2\not\in C(u_1)$,
for each $i\in \{1\}\cup [4, 7]\}$, mult$_{S_u}(i) = 1$,
and for each $j\in T_{u v}$, mult$_{S_u}(j) = 2$.
It follows that $c(u_1 u_2), c(u_1 u_3)\not\in \{1\}\cup [4, 7]\}$.

Assume w.l.o.g. $c(u_1 u_3) \in T_{u v}$.
It follows from $2\not\in C(u_1)$ that $1\in C(u_3)\setminus C(u_2)$.
Recall that $H$ contains a $(1, 3)_{(u_1, v_1)}$-path.
We have $c(u_1 u_2)\ne 3$ and assume w.l.o.g. $c(u_1 u_2) = 9$.
Thus $u v\overset{\divideontimes}\to 9$\footnote{Since $1\not\in C(u_2)$ and $2\not\in C(u_1)$, $9\not\in B_1\cup B_2$, a contradiction to \ref{prop3001}. }.

(3.6.) $\{i_1, i_2\} = \{2, 3\}$ and $(v u_2, v u_3)_c = (5, 4)$.
If $5\not\in C(u_3)$ and $H$ contains no $(5, i)_{(u_3, u_i)}$-path for each $i\in [1, 2]$,
then $u u_3\to 5$ reduces the proof to (3.2.).
If $4\not\in C(u_3)$ and $H$ contains no $(4, i)_{(u_2, u_i)}$-path for each $i\in \{1, 3\}$,
then $u u_2\to 4$ reduces the proof to (3.2.).
Otherwise, $5\in C(u_3)$ or $H$ contains a $(5, i)_{(u_3, u_i)}$-path for some $i\in [1, 2]$,
and $4\in C(u_3)$ and $H$ contains a $(4, i)_{(u_2, u_i)}$-path for some $i\in \{1, 3\}$.
It follows that $1/2/5\in C(u_3)$, $1/3/4\in C(u_2)$ and $T_2\ne\emptyset$, $T_3\ne\emptyset$.
Hence, it follows from Corollary~\ref{auv2} (1.1) that $1\in C(u_2)\cap C(u_3)$,
and from (6.1) that $2, 3, 6, 7\in S_u$.

Since $d(u_1) + d(u_2) + d(u_3)\ge |\{1\}| + |\{1, 2, 5\}| + |\{1, 3, 4\}| + 2\times |[8, \Delta + 5]| + |[2, 3]\cup [6, 7]|= 7 + 2(\Delta - 2) + 4 = 2\Delta + 7$,
assume w.l.o.g. $5\not\in C(u_1)\cup C(u_3)$.
It follows that $H$ contains a $(2, 5)_{(u_3, u_2)}$-path.
Then $u u_1\to 5$, $u u_3\to T_3$, and $u v\overset{\divideontimes}\to T_2$.

(3.7.) $\{i_1, i_2\} = \{1, 2\}$ and $(v u_2, v u_3)_c = (5, 4)$.
If $4\not\in C(u_2)$ and $H$ contains no $(1, 4)_{(u_1, u_2)}$-path,
then $u u_2\to 4$ reduces the proof to (3.5.).
If $4\not\in C(u_1)$ and $H$ contains no $(2, 4)_{(u_1, u_2)}$-path,
then $u u_1\to 4$ reduces the proof to (3.4.).
Otherwise, $4\in C(u_2)$ or $H$ contains no $(1, 4)_{(u_1, u_2)}$-path;
and $4\in C(u_1)$ or $H$ contains no $(2, 4)_{(u_1, u_2)}$-path.
It follows that $1/4\in C(u_2)$, $T_2\ne\emptyset$, $3\in C(u_2)$,
and $2/4\in C(u_1)$, $4\in C(u_1)\cup C(u_2)$.

If $5\not\in C(u_1)\cup C(u_3)$,
then $u u_3\to 5$ and $u u_2\to T_2$ reduces the proof to (3.5.).
If $5, 2, 3\not\in C(u_1)$, then $u u_1\to 5$ reduces the proof to (3.3.).
Otherwise, $5\in C(u_1)\cup C(u_3)$ and $2/3/5\in C(u_1)$.

One sees that $(T_1\cup \{3\})\setminus C(u_1)\ne \emptyset$.
If follows from Corollary~\ref{auv2} (5.1) that $2\in S_u$.
One sees that if there exists an $i\in [6, 7]\setminus S_u$,
then similar argument applies if $u u_2\to i$.
Hence, $6, 7\in S_u$.

When $1\not\in S_u$, it follows that $4\in C(u_2)$.
If $4\not\in C(u_1)$, then $(u u_1, u u_2)\to (4, 1)$ reduces the proof to (3.4.).
Otherwise, $4\in C(u_1)$.
It follows from $2/3/5\in C(u_1)$ that $T_1\ne \emptyset$ and then $3\in C(u_1)$.
Then $d(u_1) + d(u_2) + d(u_3)\ge |\{1, 3, 4\}| + |[2, 5]| + |\{3, 4\}| + 2\times |[8, \Delta + 5]| + |\{2\}\cup [5, 7]|= 9 + 2(\Delta - 2) + 4 = 2\Delta + 9> 2\Delta + 8$, a contradiction.

In the other case, $1\in S_u$.
Since $d(u_1) + d(u_2) + d(u_3)\ge |\{1\}| + |\{2, 3, 5\}| + |\{3, 4\}| + 2\times |[8, \Delta + 5]| + |\{1, 2\}\cup [4, 7]|= 6 + 2(\Delta - 2) + 6 = 2\Delta + 8$,
we have for each $i\in [1, 7]$, mult$_{S_u}(i) = 1$,
and for each $j\in T_{uv}$,  mult$_{S_u}(i) = 2$.
It follows that $3\not\in C(u_1)$, $C(u_1) = [1, 2]\cup T_{uv}$, and $c(u_1u_2)\not\in [1, 7]$.
Assume w.l.o.g. $c(u_1 u_2) = 8$.
Since $8\in T_3$, $(u u_1, u u_2, u u_3)\to (4, 1, 8)$  reduces the proof to (3.4.).

\subsubsection{$d(u) = 5$ and $c(u u_i) = i\in [1, 4]$, $c(v v_j) = j\in \{i_1, i_2\}\cup [5, d(v) + 1]$, where $i_1, i_2\in [1, 4]$. }


In this subsection, $5\le d(v)\le 6$,
and $t_{u v} = \Delta - (d(u) + d(v)) + 9 = \Delta + 4 - d(v)$.
One sees clearly from \ref{prop3001} that $T_{u v}\subseteq B_{i_1}\cup B_{i_2}\subseteq C(u_{i_1})\cup C(u_{i_2})$.
Besides Corollary~\ref{auv2}, we additionally have the following Corollary~\ref{auv2d(u)5},
which plays a very important role in the proof.

\begin{corollary}
\label{auv2d(u)5}
If $d(u) = 5$, $a_{u v} = 2$, i.e., $A_{u v} = \{i_1, i_2\}$ and $U_{u v} = \{i_3, i_4\}$,
then either an acyclic edge $(\Delta + 5)$-coloring for $H$ can be obtained such that $|C(u)\cap C(v)| \le 1$,
or the following holds:
\begin{itemize}
\parskip=0pt
   \item[{\rm (1)}]
        assume $d(u) = 5$ and $C(u_{i_3}) = \{i_3\}\cup T_{i_1}\cup T_{i_2}$ with $T_{i_1}\ne\emptyset$.
        \begin{itemize}
        \parskip=0pt
        \item[{\rm (1.1)}]
             if $(T_{i_2}\cup \{i_3\})\setminus C(u_{i_2})\ne \emptyset$,
             then $H$ contains a $(j, i_4)_{(u_{i_1}, u_{i_4})}$-path,
             and $i_4\in C(u_{i_1})$, $T_{i_1}\subseteq C(u_{i_4})$;
        \item[{\rm (1.2)}]
             if $T_{i_2}\ne \emptyset$,
             then for each $i\in \{i_1, i_2\}$, $H$ contains a $(j, i_4)_{(u_i, u_{i_4})}$-path,
             and $i_4\in C(u_i)$, $T_i\subseteq C(u_{i_4})$;
        \item[{\rm (1.3)}]
             $i_3\in B_{i_1}\cup B_{i_2}$;
        \item[{\rm (1.4)}]
             if $T_{uv}\setminus (T_{i_1}\cup T_{i_2})\ne \emptyset$,
             then mult$_{S_u}(i_3)\ge 2$,
             and for each $i\in \{i_1, i_2\}$, $i_3\in C(u_i)$ or $H$ contains a $(i_3, a_i)_{(u_{i}, u_{a_i})}$-path for some $a_i\in C(u)$.
        \end{itemize}
   \item[{\rm (2)}]
        \begin{itemize}
        \parskip=0pt
        \item[{\rm (2.1)}]
             if recoloring some edges incident to $u$ reduces to an acyclic edge $(\Delta + 5)$-coloring $c'$ for $H$ such that $C(u)\cap C(v) = \{i_1, i_2\}$, where $(uu_a, uu_b)_{c'} = (i_1, i_2)$,
             and $a\ne i_1$, $b\ne i_2$,
             then $T_{u v}\subseteq B_i\subseteq C(v_i)$, and $d(v_i)\ge t_{u v} + 1$, $i = i_1, i_2$;
        \item[{\rm (2.2)}]
             if $(v v_{i_1}, v v_{i_2})\to (i_2, i_1)$ reduces to an acyclic edge $(\Delta + 5)$-coloring for $H$, then $T_{u v}\subseteq B_i\subseteq C(u_i)$, and $d(u_i)\ge t_{u v} + 1$, $i = i_1, i_2$.
         \end{itemize}
   \item[{\rm (3)}]
        if $R_{i}\ne \emptyset$ for some $i\in \{i_1, i_2\}$,
        then $i_1/i_2\in S_u$, and if $i_1\not\in C(u_3)$ and $H$ contains no $(i_1, i)_{(u_{i_3}, u_i)}$-path for each $i\in \{i_2, i_4\}$, then $i_3\in C(u_1)$
	or $H$ contains a $(i_1, i)_{(u_{i_3}, u_i)}$-path for some $i\in \{i_2, i_4\}$.
   \item[{\rm (4)}]
        for each $j\in (C(u_{i_1})\cap C(u_{i_2}))\setminus (C(u_{i_3})\cup C(u_{i_4}))$,
        \begin{itemize}
        \parskip=0pt
        \item[{\rm (4.1)}]
             if for each $i\in \{i_3, i_4\}$, $H$ contains no $(j, l)_{(u_i, u_l)}$-path,
             then $i\in B_{i_1}\cup B_{i_2}$,
             and for each $j\in \{i_1, i_2\}$, $i\in C(u_j)$, or there exists a $a_j\in C(u_j)$ such that $H$ contains a $(a_j, i)_{u_i, u_{a_j}}$-path;
        \item[{\rm (4.2)}]
             there exists an $i\in \{i_3, i_4\}$ such that $H$ contains no $(l, j)_{(u_i, u)}$-path for $l\in C(u)$, $i\in B_{i_1}\cup B_{i_2}$,
             and for each $j\in \{i_1, i_2\}$, $i\in C(u_j)$, or there exists a $a_j\in C(u_j)$ such that $H$ contains a $(a_j, i)_{u_i, u_{a_j}}$-path.
         \end{itemize}
   \item[{\rm (5)}]
        for each $j\in T_{i_1}\cup T_{i_2}$, mult$_{S_u}(j) \ge 3$;
        or if mult$_{S_u}(j) = 2$, and $j\in T_i$ for some $i\in \{i_1, i_2\}$,
        then there exists an $a_i\in \{i_3, i_4\}$,
        such that $H$ contains a $(a_i, l)_{(u_{i_1}, u_{a_i})}$-path for each $l\in T_i$ and $j\in T_i\subseteq C(u_{a_i})$;
   \item[{\rm (6)}]
        for each $j\in T_{i_2}\ne \emptyset$,
        if $H$ contains a $(a_{i_2j}, j)_{(u_{i_2}, u_{a_{i_2j}})}$-path,
        then $a_{i_2j}\in C(v_{i_1})$ or $H$ contains a $(a_{i_2j}, b_{i_1})$-path for some $b_{i_1}\in C(v)$;
   \item[{\rm (7)}]
        assume $T_{i_1}\ne\emptyset$, $(T_{i_2}\cup \{i_3\})\setminus C(u_{i_2})\ne \emptyset$.
        \begin{itemize}
        \parskip=0pt
        \item[{\rm (7.1)}]
             If $i_3\in C(u_{i_1})$, then $i_4\in C(u_{i_1})$, or $\{i_1\}\cup V_{uv}\subseteq S_u$;
        \item[{\rm (7.2)}]
             If $i_1\not\in C(u_{i_3})$ and $H$ contains no $(i, i_1)_{(u_{i_3}, u_i)}$-path for each $i\in C(u)$, then $i_4\in C(u_{i_1})$ and $H$ contains a $(i_4, j)_{(u_{i_1}, u_{i_4})}$-path for each $j\in T_{i_1}$.
         \end{itemize}
   \item[{\rm (8)}]
        if $T_i\ne \emptyset$ and $H$ contains a $(l, j)(v_{i_1}, v_l)$-path for some $j\in T_{i_1}\cap R_{i_1}$,
        $l\in V_{uv}$, then $l\in C(u_{i_2})$ or $H$ contains a $(l, i)_{(u_{i_2}, u_i)}$-path for some $i\in C(u)$.
   \item[{\rm (9)}]
        If $T_{i_1}\setminus T_0 \ne \emptyset$,
        then $i_3\in C(u_{i_1})$, or $i_3\in C(u_{i_2})\cap C(u_{i_4})$.
\end{itemize}
\end{corollary}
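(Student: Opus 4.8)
The plan is to follow exactly the reduction-or-obstruction template that already drives Corollary~\ref{auv2} and Corollary~\ref{auv3}. Throughout I start from the fixed acyclic edge $(\Delta+5)$-coloring $c(\cdot)$ of $H=G-uv$ with $A_{uv}=\{i_1,i_2\}$ and $U_{uv}=\{i_3,i_4\}$, and for each claimed property I assume that the relevant recoloring move fails to produce either a $(\Delta+5)$-coloring of $G$ or a coloring of $H$ with $|C(u)\cap C(v)|\le 1$. The available moves are: coloring $uv$ outright, recoloring one of $uu_{i_1},uu_{i_2},uu_{i_3},uu_{i_4}$, swapping $(vv_{i_1},vv_{i_2})\to(i_2,i_1)$, or recoloring a single $v$-side edge. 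Because the paper's $\overset{\divideontimes}\to$ convention records that any successful move finishes the case, the failure of each move forces a specific bichromatic path whose color pair and endpoints are then pinned down by Lemma~\ref{lemma06}. Items (1)--(4) are specializations of Lemma~\ref{auvge2}(1)--(4) and Corollary~\ref{auv2}(1)--(5) to the present situation $d(u)=5$, $|U_{uv}|=2$, so I would derive them first, quoting \ref{prop3001} and \ref{prop3209} for the coarse containments $T_{uv}\subseteq B_{i_1}\cup B_{i_2}$ and $\mathrm{mult}_{S_u}(j)\ge 2$.

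For (1) I would fix $j\in T_{i_1}$ and exploit the hypothesis $C(u_{i_3})=\{i_3\}\cup T_{i_1}\cup T_{i_2}$: recoloring $uu_{i_1}\to j$ must create a bichromatic cycle (else done), and by Corollary~\ref{auv2}(1.2) this cycle rides an $(a_{i_1},j)$-path with $a_{i_1}\in\{i_3,i_4\}\cap C(u_{i_1})$; the precise form of $C(u_{i_3})$ together with Lemma~\ref{lemma06} rules out $a_{i_1}=i_3$, giving (1.1) with $a_{i_1}=i_4$, and the symmetric argument on $u_{i_2}$ gives (1.2). Items (1.3)--(1.4) then come from attempting $uu_{i_3}\to j'$ for $j'\in T_{uv}\setminus(T_{i_1}\cup T_{i_2})$. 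Items (2.1)--(2.2) are the cleanest: once a recoloring has placed $i_1,i_2$ on edges $uu_a,uu_b$ (respectively, once the swap $(vv_{i_1},vv_{i_2})\to(i_2,i_1)$ has been made), I re-run the \ref{prop3001} analysis on the new coloring, reading off $T_{uv}\subseteq B_i\subseteq C(v_i)$ (resp.\ $\subseteq C(u_i)$) and obtaining $d(v_i)\ge t_{uv}+1$ as the injection count sending each of the $t_{uv}$ colors of $T_{uv}$ to a distinct edge at $v_i$. Item (3) handles $R_i\ne\emptyset$ by failing $vv_i\to j$ and then chaining a $u$-side recoloring of $uu_{i_1}$; items (4), (6) and (8) are each a single instance of ``fail a recoloring, read off the forced path by Lemma~\ref{lemma06}.''

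The substantial items are (5), (7) and (9), the analogues of Corollary~\ref{auv3}(6)--(7) that carry the multiplicity counting on which the later degree-sum estimates depend. For (5) I assume $\mathrm{mult}_{S_u}(j)=2$ with $j\in T_i$; the two copies of $j$ in $S_u$ cannot both be genuine independent endpoints by Lemma~\ref{lemma06}, so failing $uu_{i_1}\to j$ forces a common absorbing endpoint $a_i\in\{i_3,i_4\}$ with $T_i\subseteq C(u_{a_i})$. For (7) I split on whether $i_3\in C(u_{i_1})$: in that case failing the recoloring $uu_{i_1}\to i_4$ (or the chain that frees $i_1$) forces $\{i_1\}\cup V_{uv}\subseteq S_u$ through Corollary~\ref{auv2}(5.1); otherwise, if $i_1\notin C(u_{i_3})$ with no $(i,i_1)_{(u_{i_3},u_i)}$-path, the swap $(uu_{i_1},uu_{i_3})\to(i_4,\cdot)$ would reduce $a_{uv}$ unless $i_4\in C(u_{i_1})$ and $i_4$ blocks every $j\in T_{i_1}$. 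Item (9) is then a counting consequence: $T_{i_1}\setminus T_0\ne\emptyset$ yields some $j$ with $\mathrm{mult}_{S_u}(j)\le 2$, and (5) together with (1) localizes its two copies to $\{i_3\}\cup C(u_{i_2})\cup C(u_{i_4})$, forcing $i_3\in C(u_{i_1})$ or $i_3\in C(u_{i_2})\cap C(u_{i_4})$.

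The main obstacle will be the bookkeeping in (5) and (7): every assertion has the shape ``holds unless a recoloring works,'' and the recolorings interact, since freeing $i_1$ on the $u$-side can be blocked only by a path that itself constrains where $i_4$ and the colors of $V_{uv}$ appear. I must therefore track the entire collection of forced bichromatic paths simultaneously and repeatedly invoke Lemma~\ref{lemma06} to certify that two putative paths sharing a color pair cannot coexist. The real danger is a silent double-count in $S_u$ or an overlooked successful move; to guard against both I would, exactly as the paper does elsewhere, phrase each step as ``either $\overset{\divideontimes}\to$ (done) or \dots'' and verify that the ``or'' branch invokes only containments established earlier in the same proof, never a property proved in a later item.
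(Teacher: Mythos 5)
Your overall template --- ``either a recoloring succeeds (and we are done) or its failure forces a bichromatic path, pinned down by Lemma~\ref{lemma06}'' --- is exactly the paper's, and your treatments of (2), (3), (4), (6) and (8) match the paper's proof essentially move for move. But there is a concrete gap in your derivation of (1.1), which propagates to (7.1)--(7.2) since they rest on the same move. You propose to recolor only $u u_{i_1}\to j$ and to argue that the blocking path must be an $(i_4, j)_{(u_{i_1}, u_{i_4})}$-path because ``the precise form of $C(u_{i_3})$ together with Lemma~\ref{lemma06} rules out $a_{i_1}=i_3$.'' It does not: an $(i_3, j)_{(u_{i_1}, u_{i_3})}$-path requires only $i_3\in C(u_{i_1})$ and $j\in C(u_{i_3})$, both of which are compatible with the hypothesis $C(u_{i_3})=\{i_3\}\cup T_{i_1}\cup T_{i_2}$ (indeed $j\in T_{i_1}\subseteq C(u_{i_3})$ is forced by it), and Lemma~\ref{lemma06} only forbids two $(a,b)$-paths emanating from $u$ itself, not a path joining $u_{i_1}$ to $u_{i_3}$.

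The paper removes this obstruction with a compound move: first $u u_{i_3}\to i_1$, which is legal because $i_1\notin C(u_{i_3})$ and $C(u_{i_3})\setminus\{i_3\}\subseteq T_{u v}$ is disjoint from $C(u)$, so no bichromatic cycle through $u u_{i_3}$ can arise; this deletes $i_3$ from $C(u)$. Only then is $u u_{i_1}\to j$ attempted, so the sole surviving obstruction is the $(i_4, j)_{(u_{i_1}, u_{i_4})}$-path, and the move is completed by $u v\to (T_{i_2}\cup\{i_3\})\setminus C(u_{i_2})$. This is precisely why the hypothesis $(T_{i_2}\cup\{i_3\})\setminus C(u_{i_2})\ne\emptyset$ appears in (1.1) and (7.1)--(7.2): it guarantees a free color for $u v$ after the compound recoloring. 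Your plan never invokes that hypothesis, which is the symptom of the missing step. The same compound move with $T_{i_1}$ in place of $j$, together with the extra branch ``$u u_{i_1}\to i$ for $i\in V_{u v}\setminus S_u$,'' is what yields $\{i_1\}\cup V_{u v}\subseteq S_u$ in (7.1); your appeal to Corollary~\ref{auv2}(5.1) alone does not produce it.
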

\begin{proof}
For (1), if $H$ contains no $(i_4, j)_{(u_{i_1}, u_{i_4})}$-path for some $j\in T_{i_1}$,
then $(uu_{i_3}, uu_{i_1})\to (i_1, j)$ and $uv\overset{\divideontimes}\to (T_{i_2}\cup \{i_3\})\setminus C(u_{i_2})$. Otherwise, (1.1) holds.
If $T_{i_2}\ne \emptyset$, then (7.2) holds naturally
by (1.1), $H$ contains a $(j, i_4)_{(u_{i_1}, u_{i_4})}$-path for each $j\in T_{i_1}$.

If $i_3\not\in B_{i_1}\cup B_{i_2}$, then $(uu_{i_3}, uv)\to (5, i_3)$.
Otherwise, (1.3) holds.
One sees that $u u_{i_3}\to T_{u v}\setminus (T_{i_1}\cup T_{i_2})$ cannot obtain a bichromatic cycle.
It follows from Corollary~\ref{auv2}(3) that (1.4) holds.

For (2), let $j\in T_{uv}$.
Since $j\in B_{i_1}\cup B_{i_2}$, assume w.l.o.g. $j\in B_{i_1}$,
i.e., $H$ contains a $(j, i_1)_{(u_{i_1}, v_{i_1})}$-path and then $j\in C(v_{i_1})$.
One sees from Lemma~\ref{lemma06} that $H$ contains no $(j, i_1)_{(u_a, v_{i_1})}$-path.
If $H$ contains no $(i_2, j)_{(u_b, v_{i_2})}$-path,
then on the basic coloring of $c'$, $uv\overset{\divideontimes}\to j$.
Otherwise, $H$ contains a $(i_2, j)_{(u_b, v_{i_2})}$-path, $j\in C(v_{i_2})$.
Hence, $T_{u v}\subseteq C(v_i)$, $i = i_1, i_2$, and (2.1) holds.
Likewise, (2.2) holds.

For (3), if $i_1\not\in C(u_3)$, $H$ contains no $(i_1, i)_{(u_{i_3}, u_i)}$-path for each $i\in \{i_2, i_4\}$,
and $i_3\not\in C(u_1)$, $H$ contains no $(i_1, i)_{(u_{i_3}, u_i)}$-path for each $i\in \{i_2, i_4\}$,
then $(u u_{i_1}, u u_{i_2})\to (i_2, i_1)$ is still an acyclic edge $(\Delta + 5)$-coloring for $H$,
and by (2.1), for each $i\in \{i_1, i_2\}$, $R_i = \emptyset$, a contradiction.

It follows Corollary~\ref{auv2} (3) that (4.1) holds.
Since $j\in B_{i_1}\cup C(u_{i_2})$, $H$ contains a $(i, j)_{(u_i, v_i)}$-path for some $i\in \{i_1, i_2\}$.
By Lemma~\ref{lemma06}, there exists an $i\in \{i_3, i_4\}$ such that $H$ contains no $(l, j)_{(u_i, u)}$-path for $l\in C(u)$,
and by Corollary~\ref{auv2} (3), (4.2) holds naturally.

It follows from Corollary~\ref{auv2} (1.2), (5.2) that (5) holds.

For (6), if for some $j\in T_{i_2}\ne \emptyset$,
$a_{i_2j}\not\in C(v_{i_1})$ and $H$ contains no $(a_{i_2j},l)$-path for each $l\in C(v)$,
then $(v v_{i_1}, u v)\overset{\divideontimes}\to (a_{i_2j}, j)$.
Otherwise, (6) holds.

For (7), if $i_4\not\in C(u_{i_1})$ and $i_1\not\in S_u$,
then $(u u_{i_3}, u u_{i_1})\to (i_1, T_{i_1})$,
and $u v\overset{\divideontimes}\to (T_{i_2}\cup \{i_3\})\setminus C(u_{i_2})$.
If there exist an $i\in V_{u v}\setminus S_u$,
one sees that the same argument applies if $uu_{i_1}\to i$,
then an acyclic edge $(\Delta + 5)$-coloring for $H$ can be obtained since $i\not\in S_u$.
Otherwise, $V_{u v}\subseteq S_u$ and (7.1) holds.

If $H$ contains no $(i_4, j)_{(u_{i_1}, u_{i_4})}$-path for some $j\in T_{i_1}$,
then $(u u_3, u u_1)\to (i_1, j)$ and $uv\overset{\divideontimes}\to (T_{i_2}\cup \{i_3\})\setminus C(u_{i_2})$.
Otherwise, (7.2) holds.

For (8),
if $l\not\in C(u_{i_2})$ and $H$ contains no $(l, i)_{(u_{i_2}, u_i)}$-path for each $i\in C(u)$,
since $H$ contains a $(i_2, j)_{(u_{i_2}, v_{i_2})}$-path and a $(l, j)_{(v_{i_1}, v_{l})}$-path,
$(u u_{i_2}, u v)\overset{\divideontimes}\to (l, j)$.
Otherwise, (8) holds.

For (9), if $i_4\not\in C(u_{i_1})$,
one sees that $i_3\in C(u_{i_1})$ and $H$ contains a $(i_3, T_{i_1})_{(u_{i_1}, u_{i_3})}$-path.
It follows from Corollary~\ref{auv2}(5.2) and (4) that $i_4\in B_{i_2}$, mult$_{S_u}(i_4)\ge 2$, and (9) holds.
\end{proof}

By Corollary~\ref{auv2d(u)5}(6), we proceed with the following proposition:
\begin{proposition}
\label{auv2Cvi1vi2}
If $T_{i_2}\ne \emptyset$ and $C(v_{i_1})\setminus \{i_1\}\subseteq T_{u v}$,
or $T_{i_1}\ne \emptyset$ and $C(v_{i_2})\setminus \{i_2\}\subseteq T_{u v}$,
then an acyclic edge $(\Delta + 2)$-coloring for $G$ can be obtained in $O(1)$ time.
\end{proposition}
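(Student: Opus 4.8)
The plan is to reduce to one of the two symmetric hypotheses and then recolor the single edge $vv_{i_1}$ so that some color becomes available for the missing edge $uv$. By the symmetry interchanging $i_1$ and $i_2$ it suffices to treat the case $T_{i_2}\neq\emptyset$ and $C(v_{i_1})\setminus\{i_1\}\subseteq T_{uv}$. First I would fix a color $j\in T_{i_2}$ and locate it: since $j\in T_{i_2}$ means $j\notin C(u_{i_2})$, there is no $(i_2,j)_{(u_{i_2},v_{i_2})}$-path, so $j\notin B_{i_2}$; because $T_{uv}\subseteq B_{i_1}\cup B_{i_2}$ by \ref{prop3001}, this forces $j\in B_{i_1}$, whence $H$ contains an $(i_1,j)_{(u_{i_1},v_{i_1})}$-path and in particular $j\in C(v_{i_1})$. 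That path is the single obstruction to setting $uv\to j$ outright, as it would close an $(i_1,j)$-bichromatic cycle together with $uu_{i_1}$, $vv_{i_1}$ and $uv$.

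The key role of the hypothesis is to make the recoloring of $vv_{i_1}$ essentially free. Since every color of $v_{i_1}$ other than $i_1$ lies in $T_{uv}$, which is disjoint from $C(v)$, the vertices $v$ and $v_{i_1}$ share no color besides $i_1$. Hence recoloring $vv_{i_1}$ to any $\alpha\neq i_1$ legal at both endpoints removes the color-$i_1$ edge at $v$ and creates no new bichromatic cycle through $vv_{i_1}$: a putative $(\alpha,\beta)$-cycle would need $\beta\in C(v_{i_1})\setminus\{\alpha\}\subseteq\{i_1\}\cup T_{uv}$, but neither $i_1$ (now absent from $C(v)$) nor any color of $T_{uv}$ survives at $v$. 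After such a recoloring the $(i_1,j)$-path can no longer be completed through $v$, so the remaining task is to choose $\alpha$ so that $uv\to j$ introduces no $(i,j)_{(u,v)}$-cycle for $i\in C(u)\cap C(v)$.

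To pin down a usable $\alpha$ I would invoke the structural results already in hand. By Corollary~\ref{auv2}(1.1) if $d(u)=4$, or (1.2) if $d(u)=5$ (applied with the roles of $i_1,i_2$ interchanged), the color $j\in T_{i_2}$ comes with a color $a_{i_2j}\in U_{uv}\cap C(u_{i_2})$ carrying an $(a_{i_2j},j)_{(u_{i_2},u_{a_{i_2j}})}$-path. Since $a_{i_2j}\in U_{uv}$ is disjoint from $\{i_1\}\cup T_{uv}\supseteq C(v_{i_1})$, the hypothesis gives $a_{i_2j}\notin C(v_{i_1})$, so $vv_{i_1}\to a_{i_2j}$ is legal and, by the previous paragraph, cycle-free. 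I would then apply Corollary~\ref{auv2d(u)5}(6): the $(a_{i_2j},j)$-path forces either $a_{i_2j}\in C(v_{i_1})$, impossible here, or an $(a_{i_2j},b_{i_1})$-path to $v$ with $b_{i_1}\in C(v)$. In the generic situation no such path exists, and then the recoloring $(vv_{i_1},uv)\overset{\divideontimes}\to(a_{i_2j},j)$ of Corollary~\ref{auv2d(u)5}(6) yields an acyclic edge $(\Delta+5)$-coloring of $G$; the only residual threat, an $(i_2,j)$-cycle, is excluded because $j\notin C(u_{i_2})$.

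The hard part will be the tight palette case $C(v_{i_1})=\{i_1\}\cup T_{uv}$, i.e. $d(v_{i_1})=t_{uv}+1$. Here every color of $T_{uv}$, in particular $j$, already appears at $v_{i_1}$, so the only colors legal for recoloring $vv_{i_1}$ are exactly those of $U_{uv}=C(u)\setminus\{i_1,i_2\}$. Recoloring into $U_{uv}$ keeps $a_{uv}=2$ and reintroduces $\alpha=a_{i_2j}\in C(u)\cap C(v)$, so one must rule out an $(a_{i_2j},j)_{(u,v)}$-cycle; yet Corollary~\ref{auv2}(1) gives $j\in T_{i_2}\subseteq C(u_{a_{i_2j}})$, precisely the configuration that could sustain such a path. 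This is where the second alternative of Corollary~\ref{auv2d(u)5}(6), the forced $(a_{i_2j},b_{i_1})$-path into $C(v)$, must be exploited to re-route the recoloring, for instance by a further recoloring along $b_{i_1}$ or by first driving $a_{uv}$ down to $1$ and then appealing to Lemma~\ref{auv1coloring}; checking that the degree and multiplicity counts leave enough room is the delicate step.
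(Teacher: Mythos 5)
Your first three paragraphs reproduce exactly the paper's argument: the paper derives this proposition in one line from Corollary~\ref{auv2d(u)5}(6), whose proof is precisely the recoloring $(vv_{i_1},uv)\overset{\divideontimes}\to(a_{i_2j},j)$ you describe, and your two observations --- that $a_{i_2j}\in U_{uv}$ cannot lie in $C(v_{i_1})\subseteq\{i_1\}\cup T_{uv}$, and that $C(v)\cap C(v_{i_1})=\{i_1\}$ kills every $(a_{i_2j},l)$-cycle through $vv_{i_1}$ --- are exactly the two facts that make both alternatives in the conclusion of that corollary fail. The problem is your last paragraph, where you single out the case $C(v_{i_1})=\{i_1\}\cup T_{uv}$ as a ``hard part'' and leave unresolved the threat of an $(a_{i_2j},j)$-bichromatic cycle through $uv$ and $vv_{i_1}$. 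As written this is a genuine gap, and it is also misdiagnosed: since $j\in B_{i_1}\subseteq C(v_{i_1})$ holds in every case, the same putative cycle is present whether or not the palette at $v_{i_1}$ is tight, so your ``generic'' paragraph is silently incomplete as well; and neither ``driving $a_{uv}$ down to $1$'' nor any degree or multiplicity count is the right tool.

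The missing step is a routine application of Lemma~\ref{lemma06}. An $(a_{i_2j},j)$-cycle through $uv$ would have to use $uu_{a_{i_2j}}$ at $u$ and the recolored edge $vv_{i_1}$ at $v$, hence would contain an $(a_{i_2j},j)$-path from $u$ to $v_{i_1}$ already present in the old coloring. But $c(uu_{a_{i_2j}})=a_{i_2j}$ and $j\notin C(u)$, so the maximal $(a_{i_2j},j)$-path starting at $u$ runs from $u$ through $u_{a_{i_2j}}$ along the path guaranteed by Corollary~\ref{auv2}(1) to $u_{i_2}$ and terminates there, because $j\in T_{i_2}$ means $j\notin C(u_{i_2})$. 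The vertex $v_{i_1}$ is neither that far endpoint (it carries $j\in C(v_{i_1})$ while $u_{i_2}$ does not) nor an interior vertex of the path (interior vertices carry both colors, while $a_{i_2j}\notin C(v_{i_1})$). Lemma~\ref{lemma06} then forbids any $(a_{i_2j},j)$-path from $u$ to $v_{i_1}$, so the recoloring is acyclic in all cases and the proof closes with no further case analysis.
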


\begin{corollary}
\label{Ti1i4Ti2i3}
If $d(u) = 5$, $a_{u v} = 2$, and $T_{i_1}\setminus C(u_{i_4})\ne \emptyset$, $T_{i_2}\setminus C(u_{i_3})\ne \emptyset$,
then either an acyclic edge $(\Delta + 5)$-coloring for $H$ can be obtained such that $|C(u)\cap C(v)| \le 1$,
or $H$ contains a $(T_{i_1}\cap T_{i_4}, T_{i_2}\cap T_{i_3})_{(u_{i_1}, u_{i_3})}$-path,
and a $(T_{i_2}\cap T_{i_3}, T_{i_1}\cap T_{i_4})_{(u_{i_2}, u_{i_4})}$-path.
\end{corollary}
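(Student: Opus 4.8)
The plan is to prove the first assertion, the existence of the $(T_{i_1}\cap T_{i_4},T_{i_2}\cap T_{i_3})_{(u_{i_1},u_{i_3})}$-path, and then obtain the second from it by the symmetry exchanging $(i_1,i_3)$ with $(i_2,i_4)$ and $T_{i_1}\cap T_{i_4}$ with $T_{i_2}\cap T_{i_3}$; this exchange fixes the hypothesis, so it suffices to treat one side. Fix arbitrary $\alpha\in T_{i_1}\cap T_{i_4}=T_{i_1}\setminus C(u_{i_4})$ and $\beta\in T_{i_2}\cap T_{i_3}=T_{i_2}\setminus C(u_{i_3})$, both available by hypothesis. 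First I would record the membership pattern of these two colours. Since $\alpha\notin C(u_{i_1})\supseteq B_{i_1}$, \ref{prop3001} forces $\alpha\in B_{i_2}\subseteq C(u_{i_2})\cap C(v_{i_2})$, and symmetrically $\beta\in B_{i_1}\subseteq C(u_{i_1})\cap C(v_{i_1})$; in particular $\alpha\ne\beta$, $\beta\in C(u_{i_1})$, and $\alpha\in C(u_{i_2})$. Applying Corollary~\ref{auv2}(1.2) to $\alpha\in T_{i_1}$ and (by the $i_1\leftrightarrow i_2$ symmetry) to $\beta\in T_{i_2}$, and using $\alpha\notin C(u_{i_4})$, $\beta\notin C(u_{i_3})$, I would pin down two auxiliary bichromatic paths: $P^\ast$, a $(i_3,\alpha)_{(u_{i_1},u_{i_3})}$-path with $\alpha\in C(u_{i_3})$, and $Q^\ast$, a $(i_4,\beta)_{(u_{i_2},u_{i_4})}$-path with $\beta\in C(u_{i_4})$. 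Together with the $B_{i_1}$- and $B_{i_2}$-paths, these supply exactly the bichromatic paths (terminating at $v$, respectively at $u_{i_2}$) that the cycle analysis below requires.

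The core move is the simultaneous recolouring $c'\colon\ u u_{i_1}\to\alpha,\ u u_{i_3}\to\beta$. This is a proper edge colouring, because $\alpha,\beta\notin C(u)$, $\alpha\ne\beta$, $\alpha\notin C(u_{i_1})$ and $\beta\notin C(u_{i_3})$; moreover $i_1$ leaves $C(u)$ while neither $\alpha$ nor $\beta$ enters $C(v)$, so under $c'$ we have $C(u)\cap C(v)=\{i_2\}$. Hence if $c'$ is acyclic we have produced the desired colouring with $|C(u)\cap C(v)|\le 1$ and are done. Otherwise $c'$ carries a bichromatic cycle, which must run through $u$ and use at least one of the two recoloured edges. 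Because $\alpha$ appears at $u$ only on $u u_{i_1}$ and $\beta$ only on $u u_{i_3}$, any cycle in colours $\{\alpha,\beta\}$ through $u$ uses both new edges; deleting them leaves precisely an $(\alpha,\beta)$-bichromatic path between $u_{i_1}$ and $u_{i_3}$, i.e.\ the asserted $(\alpha,\beta)_{(u_{i_1},u_{i_3})}$-path. So the whole argument reduces to showing that the cycle created by $c'$ can only be this one.

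I would then eliminate every other candidate. A $(\beta,i_2)$-cycle is impossible since $\beta\notin C(u_{i_2})$, and an $(\alpha,i_4)$-cycle since $\alpha\notin C(u_{i_4})$; the colour $i_1$ is no longer present at $u$. This leaves the two cycles built from the surviving off-palette colours, $(\alpha,i_2)$ through $u u_{i_1},u u_{i_2}$ and $(\beta,i_4)$ through $u u_{i_3},u u_{i_4}$, and dispatching these is the main obstacle. For $(\alpha,i_2)$ I would use that the $B_{i_2}$-path emanates from the unique $\alpha$-edge at $u_{i_2}$ and reaches, through $v_{i_2}$'s $i_2$-edge, the vertex $v$; since the $\{\alpha,i_2\}$-subgraph has maximum degree two, the presence of such a cycle forces this path onto it, so $v$ would carry an $\alpha$-edge, contradicting $\alpha\notin C(v)$. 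For $(\beta,i_4)$ the same role is played by $Q^\ast$, whose far endpoint is $u_{i_2}$: again the maximum-degree-two structure forces $Q^\ast$ onto the cycle, putting a $\beta$-edge at $u_{i_2}$ and contradicting $\beta\notin C(u_{i_2})$. The delicate point — where I expect the real work to sit — is that the colour $i_4$ is \emph{not} common to $u$ and $v$, so the simpler ``termination at $v$'' argument is unavailable and one must instead manufacture the auxiliary path $Q^\ast$ with the correct endpoint $u_{i_2}$ and colour pair $\{i_4,\beta\}$; Corollary~\ref{auv2}(1.2) together with $\beta\notin C(u_{i_3})$ is exactly tailored for this, while Lemma~\ref{lemma06} and Corollary~\ref{auv2d(u)5} supply the maximality and degree bookkeeping. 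With both stray cycles excluded, any failure of $c'$ yields the first path, and the symmetric recolouring $u u_{i_2}\to\beta,\ u u_{i_4}\to\alpha$ yields the second.
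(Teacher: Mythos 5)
Your proposal is correct and follows essentially the same route as the paper: the paper's (two-line) proof likewise invokes Corollary~\ref{auv2}(1.2) to obtain the $(i_3,T_{i_1})_{(u_{i_1},u_{i_3})}$- and $(i_4,T_{i_2})_{(u_{i_2},u_{i_4})}$-paths and then performs the swap $(uu_{i_1},uu_{i_3})\to(\alpha,\beta)$, concluding that the only possible obstruction is an $(\alpha,\beta)_{(u_{i_1},u_{i_3})}$-path, with the second path obtained symmetrically. Your write-up just makes explicit the cycle-elimination bookkeeping that the paper leaves implicit.
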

\begin{proof}
One sees that $H$ contains a $(i_3, T_{i_1})_{(u_{i_1}, u_{i_3})}$-path and a $(i_4, T_{i_2})_{(u_{i_2}, u_{i_4})}$-path.
If $H$ contains no $(\alpha, \beta)_{(u_{i_1}, u_{i_3})}$-path for some $\alpha\in T_{i_1}\cap T_{i_4}$,
$\beta\in T_{i_2}\cap T_{i_3}$, then $(uu_{i_1}, uu_{i_3})\to (\alpha, \beta)$ reduces to an acyclic edge $(\Delta + 5)$-coloring for $H$ such that $C(u)\cap C(v) = \{i_2\}$.
Otherwise, $H$ contains a $(T_{i_1}\cap T_{i_4}, T_{i_2}\cap T_{i_3})_{(u_{i_1}, u_{i_3})}$-path,
and a $(T_{i_2}\cap T_{i_3}, T_{i_1}\cap T_{i_4})_{(u_{i_2}, u_{i_4})}$-path.
\end{proof}

We prove the inductive step by first treating $u_3, u_4$ indistinguishably to obtain an acyclic edge $(\Delta + 2)$-coloring for $G$,
until impossible, by then (after we discuss the detailed configurations) to distinguish $u_3$ and $u_4$.
Hereafter, for each $i, j\in [1, 4]$, let $C_{ij} = T_{u v}\cap C(u_i)\cap C(u_j)$, $c_{i j} = |C_{i j}|$.

Below we distinguish two subcases where $d(v) = 5$ and $6$, respectively.


{\bf Case 1.} $d(v) = 5$.
Then $t_{u v} = \Delta - 1$, $V_{u v} = [5, 6]$.

Since $T_{u v}\subseteq C(u_{i_1})\cup C(u_{i_2})$,
one sees that if $\{i, c(v u_i)\} = \{i_1, i_2\}$, then $T_i\ne \emptyset$,
thus, for each $i\in [3, 4]$,
if $i\in A_{u v}$, then $c(v u_i)\not\in A_{uv}$,
and it follows from Corollary~\ref{auv2}(1.2) that there exists a $a_i\in \{i_3, i_4\}\cap C(u_i)$ such that $H$ contains a $(a_i, j)_{(u_i, u)}$-path for some $j\in T_i$;
if $c(v u_i) = l\in A_{u v}$, then $i\not\in A_{u v}$,
and it follows from Lemma~\ref{auvge2}(1.2) that there exists a $b_l\in V_{u v}\cap C(u_i)$ such that $H$ contains a $(b_l, j)_{(u_i, v)}$-path for some $j\in T_i$.

One sees from Eq.~(\ref{eq16}) that
if $\sum_{x\in N(u)\setminus \{v\}}d(x)\ge 2\Delta + 13$,
then $\min\{d(u_3), d(u_4)\} = 6$, or $\min\{d(u_3), d(u_4)\} = \Delta = 7$;
and if $\Delta\ge 8$ and $\sum_{x\in N(u)\setminus \{v\}}d(x)\ge \sum_{i\in [1, 4]}w_i + 2t_{u v}\ge 2\Delta + 8$,
then $\Delta = 8$, $\sum_{x\in N(u)\setminus \{v\}}d(x) = 2\Delta + 8$,
$d(u_3) = d(u_4) = 8$, $d(u_1) = d(u_2) = 6$,
and for each $j\in T_{u v}$, mult$_{S_u}(j) = 2$, i.e., $|T_0| = 0$.
By symmetry, we consider the following three cases.

\begin{figure}[h]
\begin{center}
\includegraphics[width=5.5in]{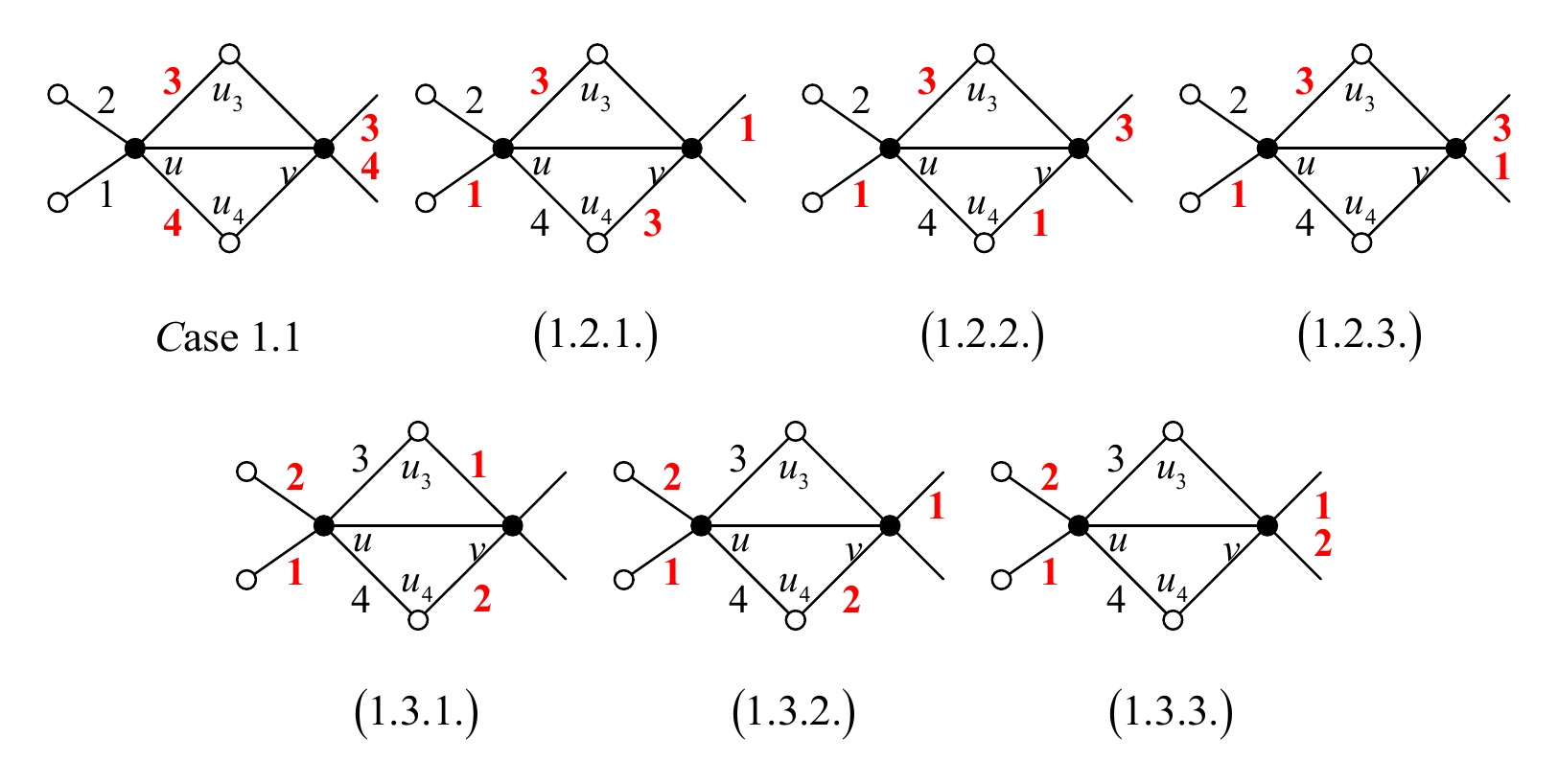}
\caption{Coloring of ($A_7$): $|C(u)\cap C(v)| = 2$ and $C(v) = \{i_1, i_2, 5, 6\}$.\label{fig08}}
\end{center}
\end{figure}

Case 1.1. $\{i_1, i_2\} = \{3, 4\}$.
It follows that $(v u_3, v u_4)_c = (5, 6)$,
$T_{u v}\subseteq C(u_3)\cup C(u_4)$,
and there exists a $a_3\in \{1, 2\}\cap C(u_3)$ and a $a_4\in \{1, 2\}\cap C(u_4)$.
Let $C(u_3) = W_3\cup T_4\cup C_{3 4}$, $C(u_4) = W_4\cup T_3\cup C_{3 4}$.

Note that $H$ contains a $(4, j)_{(u_4, v_4)}$-path for each $j\in T_3$ and a $(3, l)_{(u_3, v_3)}$-path for each $l\in T_4$.
It follows from Corollary~\ref{auv2}(4) that if $H$ contains no $(6, j)_{(u_3, u_4)}$-path for some $j\in T_3$,
then $5\in B_4$ and mult$_{S_u\setminus C(u_3)}(5)\ge 2$,
and if $H$ contains no $(5, j)_{(u_3, u_4)}$-path for some $j\in T_4$,
then $6\in B_3$ and mult$_{S_u\setminus C(u_4)}(6)\ge 2$.
Hence, by symmetry, assume w.l.o.g. $6\in C(u_3)$.

Note that $t_3\ge 3$, $t_4\ge 2$, and $5\in C(u_4)$ or $6\in C(u_1)\cup C(u_2)$.
For each $i = i_3\in [1, 2]$ with $d(u_i)\le 6$,
we first show that $(T_3\cup T_4)\setminus C(u_i)\ne \emptyset$.
One sees that if $t_3 + t_4\ge 6$, then $(T_3\cup T_4)\setminus C(u_i)\ne \emptyset$.
Now assume $t_3 = 3$, $t_4 = 2$.
It follows $W_3 = \{3, 5, 6, a_3\}$ and $W_4 = \{4, 6, a_4\}$.
If $C(u_i) = \{i\}\cup T_3\cup T_4$,
then it follows from Corollary~\ref{auv2d(u)5}(1.2) that $i_4\in C(u_3)\cap C(u_4)$,
and from Corollary~\ref{auv2d(u)5}(1.3) that $i\in C(u_3)\cup C(u_4)$.
Otherwise, $(T_3\cup T_4)\setminus C(u_i)\ne \emptyset$.
Next, it follows from Corollary~\ref{auv2} (5.1) that
there exist a $l_i\in [3, 4]$ such that $i_4\in C(u_l)$, $l\in S_u$,
and from Corollary~\ref{auv2}(5.2) that $H$ contains a $(i_4, j)_{(u_l, u_{i_4})}$-path for each $j\in T_l$, $T_l\subseteq C(u_{i_4})$.

For each $i\in [1, 2]$, let $\{i, j\} = [1, 2]$,
if $j, 3, 4, 5\not\in C(u_i)$, then $(u u_i, u v)\overset{\divideontimes}\to (5, T_3)$;
otherwise, $j/3/4/5\in C(u_i)$ and then $T_i\ne \emptyset$.
One sees that $d(u_3) + d(u_4)\ge t_{uv} + |\{3, 5, 6, a_3\}| + |\{4, 6, a_4\}| = \Delta + 6$.
It follows that $\min\{d(u_3), d(u_4)\}\ge 6$.

(1.1.1) $\min\{d(u_3), d(u_4)\} = 6$.
Then $C(u_3) = \{3, 5, 6, a_3\}\cup T_4$, $C(u_4) = \{4, 6, a_4\}\cup T_3$,
and $T_4\cup T_3 = T_{u v}$, $6\in C(u_1)\cup C(u_2)$.
It follows from Corollary~\ref{auv2}(5.1) that $3, 4\in S_u$,
and from Corollary~\ref{auv2}(5.2) that for each $i\in [1, 2]$, mult$_{S_u}(i)\ge 2$.
Assume w.l.o.g. $C(u_3) = \{3, 5, 6, 2\}\cup T_4$, $C(u_4) = \{4, 6, 1\}\cup T_3$, and $1, 2, 3, 4, 6\in C(u_1)\cup C(u_2)$.
Since $\sum_{x\in N(u)\setminus \{v\}}d(x)\ge \sum_{i\in [1, 4]}w_i + 2t_{u v} + t_0\ge |\{1\}| + |\{2\}| + |\{3, 5, 2, 6\}| + |\{4, 6, 1\}| + |\{1, 2, 3, 4, 6\}| + 2t_{u v} + t_0 = 2\Delta + 12 + t_0$,
we have mult$_{S_u\setminus C(u_3)}(5)\le 1$.

(1.1.2) $\min \{d(u_3), d(u_4)\}\in [7, 8]$.
One sees that $\min\{d(u_1), d(u_2)\}\le 6$.
Assume w.l.o.g. $d(u_{i_3})\le 6$ and $i_1\in [3, 4]$ with $i_4\in C(u_{i_1})$, $T_{i_1}\subseteq C(u_{i_4})$.
It follows that $1, 2\in C(u_3)\cup C(u_4)$ and mult$_{S_u}(i_3)\ge 2$, $i_1\in S_u$.
One sees that $T_{i_2}\subseteq C(u_1)\cap C(u_2)$, or $i_2\in S_u$ and mult$_{S_u}(i_3)\ge 2$.
Then $\sum_{x\in N(u)\setminus \{v\}}d(x)\ge \sum_{i\in [1, 4]}w_i + 2t_{u v} + t_0\ge |\{1\}| + |\{2\}| + |\{3, 5, 6\}| + |\{4, 6\}| + |\{5/6\}| + |\{i_3, i_1, 1, 2\}| + 2t_{u v} + t_0 = 2\Delta + 10 + t_0$,
and if $T_{i_2}\subseteq C(u_1)\cap C(u_2)$,
then $\sum_{x\in N(u)\setminus \{v\}}d(x)\ge 2\Delta + 10 + t_0\ge 2\Delta + 10 + t_{i_2}\ge 2\Delta + 12$,
or otherwise, $i_2\in S_u$ and mult$_{S_u}(i_3)\ge 2$,
then $\sum_{x\in N(u)\setminus \{v\}}d(x)\ge 2\Delta + 10 + |\{i_2, i_3\}|\ge 2\Delta + 12$.
It follows that if $5\in C(u_4)$, then $\{5, 6\}\setminus (C(u_1)\cup C(u_2))\ne \emptyset$ and assume w.l.o.g. $5\not\in C(u_1)\cup C(u_2)$,
or if $5\not\in C(u_4)$, then $6\in C(u_1)\cup C(u_2)$ and mult$_{S_u\setminus C(u_3)}(5)\le 1$.

Hence, for (1.1.1) or (1.1.2), assume w.l.o.g. mult$_{S_u\setminus C(u_3)}(5)\le 1$.
It follows that for each $j\in T_3$, $H$ contains a $(6, j)_{(u_3, u_4)}$-path,
and from Lemma~\ref{auvge2}(1.2) that $T_3\subseteq C(v_3)$.
Hence, $\{3\}\cup T_3\cup T_4\subseteq C(v_3)$.
It follows from \ref{auv2Cvi1vi2} that $T_{u v}\setminus (T_3\cup T_4)\ne \emptyset$ and $C_{3 4}\ne \emptyset$.
One sees that if $\min\{d(u_3), d(u_4)\} = 6$, then $T_{u v} = T_3\cup T_4$.
It suffices to consider the following two scenarios:
\begin{itemize}
\parskip=0pt
\item
    $5\in C(u_4)$ or $c_{3 4}\ge 2$.
    One sees that if $5\in C(u_4)$,
    $d(u_3) + d(u_4)\ge |\{3, 5, 6, a_3\}| + |[4, 6]\cup \{a_4\}| + t_{u v} + c_{3 4} = \Delta + 7 + c_{3 4}\ge \Delta + 8$;
    or if $c_{3 4}\ge 2$, then $d(u_3) + d(u_4)\ge |\{3, 5, 6, a_3\}| + |\{4, 6, a_4\}| + t_{u v} + c_{3 4} = \Delta + 6 + c_{3 4}\ge 2\Delta + 8$.
    Hence, $\min\{d(u_3), d(u_4)\} = 8$, $|C(u_3)\cap [1, 2]| = 1$,
    $|C(u_4)\cap [1, 2]| = 1$, $W_3 = \{3, 5, 6, a_3\}$, $W_4\setminus \{5\} = \{4, 6, a_4\}$,
    and $d(u_1) = d(u_2) = 6$.
    Since $1, 2\in C(u_3)\cup C(u_4)$, assume w.l.o.g. $a_3 = 2$, $a_4 = 1$.
    Then $1\in C(u_2)$, $2\in C(u_1)$.
    It follows from $3, 4\in C(u_1)\cup C(u_2)$ that $5\not\in C(u_1)\cup C(u_2)$ and $T_3\cap C(u_1) = \emptyset$.
    If $4\not\in C(u_2)$, then $(u u_2, u u_3, u v)\overset{\divideontimes}\to (5, T_3, 2)$\footnote{$(u u_2, u u_3, u v)\overset{\divideontimes}\to (5, T_3, 2)$ means $u u_2\to 5$, $u u_3\to T_3$, and $u v\to 2$. }.
    Otherwise, $4\in C(u_2)\setminus C(u_1)$ and likewise $3\in C(u_1)\setminus C(u_2)$.
    It follows from Corollary~\ref{auv2}(5.1) that $H$ contains a $(1, 3)_{(u_1, u_2)}$-path.
    Then $(u u_2, u u_3, u u_4, u v)\overset{\divideontimes}\to (5, T_3, 3, 2)$.

\item
    $5\not\in C(u_4)$, and $c_{3 4} = 1$.
    It follows that $C(v_3) = \{3, 6\}\cup T_3\cup T_4$.
    By \ref{auv2Cvi1vi2}, $H$ contains a $(6, a_4)_{(v_3, u_4)}$-path,
    and a $(a_4, j^*)_{(u_4, u_{a_4})}$-path for some $j^*\in T_4$.
    Then $(v v_3, v u_4)\to (a_4, j^*)$ reduces to an acyclic edge $(\Delta + 5)$-coloring for $H$ such that $C(u)\cap C(v) = \{4\}$.
\end{itemize}

Case 1.2. $\{i_1, i_2\} = \{1, 3\}$.
Assume w.l.o.g. $c(v u_3) = 5$.
If $5\not\in C(u_1)$ and $H$ contains no $(5, i)_{(u_1, u_i)}$-path for each $i\in [2, 4]$,
$(u u_1, u v)\overset{\divideontimes}\to (5, T_3)$.
Otherwise, $5/2/3/4\in C(u_1)$, and if $5\not\in C(u_1)$,
then $H$ contains a $(5, i)_{(u_1, u_i)}$-path for some $i\in [2, 4]$.
It follows that $T_1\ne\emptyset$, and then $2/4\in C(u_1)$.

(1.2.1) $c(v u_4) = 3$. When there exists a $j\in T_{u v}\setminus (C(u_3)\cup C(u_4))$,
say $7\not\in C(u_3)\cup C(u_4)$,
it follows that $2\in C(u_3)$, $6\in C(u_4)$,
and $H$ contains $(2, 7)_{(u_2, u_3)}$-path, a $(6, 7)_{(u_4, v_6)}$-path.
It follows from Corollary~\ref{auv2}(4) that $5\in C(u_1)$ and mult$_{S_u\setminus C(u_3)}(5)\ge 2$,
and from Corollary~\ref{auv2}(3) or (5.2) that $4\in C(u_1)$ and mult$_{S_u\setminus C(u_3)}(4)\ge 2$.
If $3\not\in C(u_1)\cup C(u_2)$, then $(u u_2, u u_3, u v)\overset{\divideontimes}\to (3, 7, T_1)$.
Otherwise, $3\in C(u_1)\cup C(u_2)$.
Since from Corollary~\ref{auv2d(u)5} (7.1) that $2\in C(u_1)$ or $1\in S_u$,
$\sum_{x\in N(u)\setminus \{v\}}d(x)\ge \sum_{i\in [1, 4]}w_i + 2t_{u v} + t_0\ge |\{1, 4, 5\}| + |\{2\}| + |\{3, 5, 2\}| + |\{3, 4, 6\}| + |\{3, 4, 5, 1/2\}| + 2t_{u v} + t_0 = 2\Delta + 12 + t_0$.
One sees from $1, 4, 5\in C(u_1)$ that $t_1\ge 2$.
Hence, for some $j\in T_1$, mult$_{S_u\setminus C(u_3)}(j) = 2$ and $H$ contains a $(4, j)_{(u_1, u_4)}$-path.
It follows from Corollary~\ref{auv2}(5.2) that $2\in C(u_1)\cup C(u_4)$ and from Corollary~\ref{auv2}(5.1) that $1\in S_u$.
Thus, $\sum_{x\in N(u)\setminus \{v\}}d(x)\ge \sum_{i\in [1, 4]}w_i + 2t_{u v} + t_0\ge |\{1, 4, 5\}| + |\{2\}| + |\{3, 5, 2\}| + |\{3, 4, 6\}| + |\{3, 4, 5, 1, 2\}| + 2t_{u v} + t_0 = 2\Delta + 13 + t_0$.
It follows that $\sum_{x\in N(u)\setminus \{v\}}d(x) = 2\Delta + 13$ and $6\not\in S_u\setminus C(u_4)$.
If $H$ contains no $(4, 6)_{(u_1, u_4)}$-path, then $(u u_1, u v)\overset{\divideontimes}\to (6, 7)$.
Otherwise, $H$ contains a $(4, 6)_{(u_1, u_4)}$-path.
Then $(u u_3, u v)\to (6, T_1)$ or $(u u_2, u u_3, u v)\overset{\divideontimes}\to (6, 7, T_1)$%
\footnote{If $H$ contains no $(6, j)_{(u_3, v)}$-path for some $j\in T_1$, then $(u u_3, u v)\overset{\divideontimes}\to (6, j)$. Otherwise, $H$ contains a $(6, j)_{(u_3, v)}$-path for each $j\in T_1$,
then $(u u_2, u u_3, uv)\overset{\divideontimes}\to (6, 7, T_1)$. }.

In the other case, $T_{u v} \subseteq C(u_3)\cup C(u_4)$.
Since $T_1\subseteq C(u_3)\cap C(u_4)$, $c_{3 4}\ge t_1$ and $d(u_3) + d(u_4)\ge w_3 + w_4 + t_{uv} + c_{3 4}\ge w_3 + w_4 + t_{u v} + t_1$.
Recall that $T_{u v} \subseteq B_1\cup B_3$.

When $w_3 + w_4\ge 8$, $d(u_3) + d(u_4)\ge w_3 + w_4 + t_{uv} + t_1 = 8 + \Delta - 1 + t_1\ge \Delta + 7 + 1 = \Delta + 8$.
It follows that $c_{34} = t_1 = 1$, $\{d(u_3), d(u_4)\} = \{8, \Delta\}$,
and $C(u_1) = \{1, a_1\}\cup (T_{u v}\setminus T_1)$.
However, $d(u_1) = \Delta - 1 - 1 + 2\ge 8> 6$, a contradiction.

When $w_3 + w_4 = 7$ and $t_1\ge 2$,
$d(u_3) + d(u_4)\ge 7 + t_{u v} + t_1 = 7 + \Delta - 1 + t_1\ge \Delta + 6 + 2 = \Delta + 8$.
It follows that $c_{34} = t_1 = 2$, $\{d(u_3), d(u_4)\} = \{8, \Delta\}$,
and $(T_{u v}\setminus T_1)\cup \{1, a_1\}\subseteq C(u_1)$.
However, $d(u_1)\ge 2 + \Delta - 1 - 2 = \Delta - 1\ge 8 - 1 = 7 > 6$, a contradiction.

Hence, $w_3 + w_4\le 7$, and if $w_3 + w_4 = 7$, then $t_1 = 1$.
Since $6\not\in C(u_3)$ or $2\not\in C(u_4)$,
One sees that if $2\not\in C(u_4)$, then it follows from Corollary~\ref{auv2}(3) that $4\in C(u_1)$,
and if $6\not\in C(u_3)$, then it follows from Corollary~\ref{auv2}(4) that $5\in C(u_1)$.
\begin{itemize}
\parskip=0pt
\item
     $W_3 = \{3, 5, 2/4, 6\}$ and $W_4 = \{3, 4, 5/6\}$.
     Since $1, 4\in C(u_1)$ and $t_1 = 1$, $C(u_1) = \{1, 4\}\cup (T_{u v}\setminus \{7\})$.
     Then $(u u_4, u u_1, u v)\overset{\divideontimes}\to (1, 7, T_3)$.
\item
     $W_3 = \{3, 5, 2/4\}$ and $W_4 = \{3, 4, 5/6, 2\}$.
     Since $1, 2/4, 5\in C(u_1)$, $t_1\ge 2$, a contradiction.
\item
     $W_3 = \{3, 5, 2/4\}$ and $W_4 = \{3, 4, 5/6\}$.
     Then $4, 5\in C(u_1)$.
     If $t_1\ge 3$, since $d(u_3) + d(u_4)\ge 6 + t_{u v} + t_1 \ge \Delta + 8$,
     it follows that $c_{3 4} = t_1 = 3$, $\{d(u_3), d(u_4)\} = \{8, \Delta\}$, and $(T_{uv}\setminus T_1)\cup \{1, a_1\}\subseteq C(u_1)$,
     however, $d(u_1)\ge 3 + \Delta - 1 - 3 = \Delta - 1\ge 8 - 1 = 7 > 6$, a contradiction.
     Otherwise, $t_1 = 2$ and $C(u_1) = \{1, 4, 5\}\cup (T_{u v}\setminus \{7, 8\})$.
     Then $(u u_4, u u_1, u v)\overset{\divideontimes}\to (1, 7, T_3)$.
\end{itemize}

(1.2.2.) $c(v u_4) = 1$.
One sees that $T_{u v}\subseteq C(u_3)\cup C(u_4)$.
Let $C(u_3) = W_3\cup T_4\cup C_{3 4}$, $C(u_4) = W_4\cup T_3\cup C_{3 4}$.
If $4\not\in C(v_3)$ and $H$ contains no $(4, i)_{(v_3, v_i)}$-path for each $i\in \{1, 5, 6\}$,
$(v v_3, u v)\overset{\divideontimes}\to (4, T_4)$.
Otherwise, $4/1/5/6\in C(v_3)$, and if $4\not\in C(v_3)$,
then $H$ contains a $(4, i)_{(v_3, v_i)}$-path for some $i\in \{1, 5, 6\}$.
It follows that $R_3\ne\emptyset$, and then $5/6\in C(v_3)$.

When $5\not\in B_1$, it follows that from Corollary~\ref{auv2} (4) that for each $j\in T_3$,
$H$ contains a $(6, j)_{(u_3, v_6)}$-path, $6\in C(u_3)$,
and from Lemma~\ref{auvge2}(1.2) that $T_3\subseteq C(v_3)$.
Hence, $\{3\}\cup T_3\cup T_4\subseteq C(v_3)$.
It follows from $R_3\ne \emptyset$ that $C_{3 4}\ne \emptyset$.
Since $d(u_3) + d(u_4) = w_3 + w_4 + t_{u v} + c_{3 4}\ge |\{3, 5, 6, a_3\}| + |\{4, 1, b_1\}| + \Delta - 1 + c_{3 4} = \Delta + 6 + c_{3 4}$,
$1\le c_{34}\le 2$ and if $c_{3 4} = 2$, then $d(u_1) = d(u_2) = 6$.
Assume w.l.o.g. $7\in C_{3 4}\setminus C(v_3)$.
Then $7\in C(u_1)$.
It suffices to consider the following two scenarios:
\begin{itemize}
\parskip=0pt
\item
   $c_{3 4} = 1$, i.e., $T_{u v} = T_3\cup T_4\cup \{7\}$.
   It follows that $C(v_3) = \{3, 5/6\}\cup T_3\cup T_4$ and $T_3\cup \{7\}\subseteq C(u_1)$.
   One sees that $\emptyset\ne T_1\subseteq T_4$.
   Then $(u u_4, u v)\overset{\divideontimes}\to (T_1, 4)$.
\item
   $c_{3 4} = 2$, i.e., $T_{u v} = T_3\cup T_4\cup \{7, 8\}$.
   Since $d(u_1) = 6$,
   it follows that $C(u_1) = T_3\cup \{1, 2/4, 7\}$ and $C(v_3) = \{3, 5/6, 8\}\cup T_3\cup T_4$.
   One sees that $T_1 = T_4\cup \{8\}$.
   Then $(u u_4, u v)\overset{\divideontimes}\to (T_4, 4)$.
\end{itemize}

In the other case, $5\in B_1$ and $5\in C(u_1)\cap C(u_4)$.
When $T_1\setminus C(u_4)\ne\emptyset$,
i.e., there exists a $j^*\in T_1\setminus C(u_4)$ such that $H$ contains a $(2, j^*){(u_1, u_2)}$-path,
it follows from Corollary~\ref{auv2}(5.2) that $4\in B_3$, $4\in C(u_2)\cup C(u_1)$,
from Corollary~\ref{auv2}(5.1) that $6\in S_u$,
and from Corollary~\ref{auv2d(u)5} (7.1) that $2\in C(u_3)$ or $3\in S_u$.
If $1\not\in C(u_2)\cup C(u_3)$, then $(u u_2, u u_1, u v)\overset{\divideontimes}\to (1, j^*, T_3)$.
Otherwise, $1\in C(u_2)\cup C(u_3)$.
Then $\sum_{x\in N(u)\setminus \{v\}}d(x)\ge \sum_{i\in [1, 4]}w_i + 2t_{u v} + t_0\ge |\{1, 2, 5\}| + |\{2\}| + |\{3, 4, 5\}| + |\{1, 4, 5\}| + |\{4, 1, 6, 2/3\}| + 2t_{u v} + t_0 = 2\Delta + 12 + t_0$.
One sees from $t_3\ge 2$ that $T_3\setminus C(u_2)\ne\emptyset$.
It follows from Corollary~\ref{auv2}(5.2) that $2\in B_1\cup B_2$, $2\in C(u_3)\cup C(u_4)$,
from Corollary~\ref{auv2}(5.1) that $3\in S_u$,
and $\sum_{x\in N(u)\setminus \{v\}}d(x)\ge \sum_{i\in [1, 4]}w_i + 2t_{u v} + t_0\ge |\{1, 2, 5\}| + |\{2\}| + |\{3, 4, 5\}| + |\{1, 4, 5\}| + |\{4, 1, 6, 2, 3\}| + 2t_{u v} + t_0 = 2\Delta + 13 + t_0$.
Hence, $\sum_{x\in N(u)\setminus \{v\}}d(x) = 2\Delta + 13$,
for each $i\in [1, 6]\setminus \{5\}$, mult$_{S_u}(i) = 1$;
for each $j\in T_{u v}$, mult$_{S_u}(j) = 2$,
and $\min\{d(u_3), d(u_4)\} = 6$, or $\Delta = \min\{d(u_3), d(u_4)\} = 7$.

Recall that $4\in C(u_3)\cap C(v_3)$ and if $2\in B_3$, then $2\in C(u_3)\cap C(v_3)$,
or if $2\in B_1$, then $2\in C(u_4)\setminus C(u_3)$.
Since $d(u_3) + d(u_4)\ge |[3, 5]| + |\{1, 4, 5\}| + |\{2\}| + t_{uv} + c_{3 4} = \Delta + 6 + c_{3 4}$,
$c_{3 4}\le 1$.
One sees from $3, 4, 5/6\in C(v_3)$ that $R_3\ge 2$.
It follows that $T_3\setminus C(v_3)\ne \emptyset$.
Assume w.l.o.g. $7\in T_3\cap R_3$ and $H$ contains a $(6, 7)_{(v_3, v_6)}$-path.
It follows from Corollary~\ref{auv2}(5.2) that $H$ contains a $(6, j)_{(v_3, v_6)}$-path for each $j\in R_3$,
$R_3\in C(v_6)$ and then $T_{u v}\subseteq C(v_1)\cup C(v_6)$.
One sees from $H$ contains no $(6, 7)_{(u_3, v_6)}$-path that $5\in C(v_3)\cup C(v_6)$.
If $2\not\in C(v_3)\cup C(v_6)$, then $(v v_3, u v)\overset{\divideontimes}\to (2, j^*)$.
If $1, 3\not\in C(v_6)$, then $(v v_6, v v_3, u v)\overset{\divideontimes}\to (3, 7, T_4)$.
Otherwise, $2\in C(v_3)\cup C(v_6)$ and $1/3\in C(v_6)$.
It follows from Corollary~\ref{auv2}(4) that $T_4\subseteq C(v_6)$.
Then $d(v_3) + d(v_6)\ge |\{3, 4, 5/6\}| + |\{6, 1/3\}| + |\{2, 5\}| + t_{uv} + t_4 = \Delta + 6 + t_4\ge \Delta + 8$.
Thus, $\min\{d(u_3), d(u_4)\} = 6$.
Since $\Delta + 6 + t_4\le 2\Delta$, $t_4\le \Delta - 6$ and then $t_3\ge 5$, $d(u_4)\ge |\{1, 4, 5\}| + t_3 \ge 8$.
It follows that $d(u_3) = 6$, $W_3 \biguplus W_4 = [3, 4, 5]\cup \{1, 4, 5\}\cup \{2\}$.

One sees from Corollary~\ref{auv2} (5.1) that $H$ contains a $(3, i)_{(u_4, u_i)}$-path for some $i\in [1, 2]$.
If $3\not\in C(u_1)$, one sees that $2\in C(u_4)\setminus C(u_3)$ and $H$ contains a $(2, 3)_{(u_4, u_2)}$-path,
then $(u u_1, u u_3, u v)\overset{\divideontimes}\to (3, 1, 7)$.
If $6\not\in C(u_1)$,
then $(u u_4, u u_3, u v)\overset{\divideontimes}\to (6, T_3, T_4)$ if $H$ contains no $(6, 2)_{(u_4, u_2)}$-path,
or else, $(u u_1, u v)\overset{\divideontimes}\to (6, 7)$.
Otherwise, $3, 6\in C(u_1)$.
Then $t_1\ge 4$ and $d(u_3)\ge |[3, 5]| + t_1 \ge 7$, a contradiction.

Now assume that $5\in B_1$ and $T_1\subseteq C(u_4)$.
Since $d(u_3) + d(u_4)\ge |\{3, 5, a_3\}| + |\{1, 4, 5\}| + t_{u v} + c_{3 4} \ge \Delta + 5 + c_{3 4}\ge \Delta + 5 + t_1$,
$t_1\le 3$ and $d(u_1)\ge |\{1, 5, 2/4\}| + |T_{u v}\setminus T_1| = 3 + \Delta - 1 - t_1\ge \Delta - 1$.
One sees that if $d(u_3) + d(u_4)\ge \Delta + 8$, then $\min\{d(u_3), d(u_4)\} = 8$,
$d(u_1) = d(u_2) = 6$ and $d(u_1)\ge \Delta - 1\ge 8 - 1 = 7 > 6$,
a contradiction.
Hence, $d(u_3) + d(u_4) = \Delta = 7$.
It follows that $t_1 = c_{3 4} = 2$, $W_3 = \{3, 5, a_3\}$, and $W_4 = \{1, 5, b_1\}$.
One sees from $2\not\in C(u_4)$ that $a_3 = 4$, $b_1 = 5$.
It follows from  Corollary~\ref{auv2} (5.2) that $T_1\cup T_3\cup \{5\}\subseteq C(u_2)$,
and from  Corollary~\ref{auv2d(u)5} (7.1) that $3, 6\in S_u$, and $2, 4\in C(u_1)$ or $2/4\in C(u_1)$, $1\in S_u$.
One sees that $t_0\ge t_1 + t_3\ge 4$.

Thus, $\sum_{x\in N(u)\setminus \{v\}}d(x)\ge \sum_{i\in [1, 4]}w_i + 2t_{u v} + t_0
\ge |\{1, 5, 2/4\}| + |\{2, 5\}| + |\{3, 4, 5\}| + |\{1, 4, 5\}| + |\{3, 6\}| + 2t_{u v} + t_0 = 2\Delta + 11 + t_0\ge 2\Delta + 11 + t_1 + t_3\ge 2\Delta + 15$.

(1.2.3.) $c(vu_4) = 6$.
If $6\not\in C(u_1)$ and $H$ contains no $(6, i)_{(u_1, u_i)}$-path for each $i\in [2, 3]$,
then $u u_1\to 6$ reduces the proof to (2.1.2.2.).
If $6\not\in C(u_3)$ and $H$ contains no $(6, i)_{(u_3, u_i)}$-path for each $i\in [1, 2]$,
then $u u_3\to 6$ reduces the proof to (2.1.2.1.).
Otherwise, $6\in C(u_1)$ or $H$ contains a $(6, i)_{(u_1, u_i)}$-path for some $i\in [2, 3]$,
and $6\in C(u_3)$ or $H$ contains a $(6, i)_{(u_3, u_i)}$-path for some $i\in [1, 2]$.
It follows that $6\in C(u_1)\cup C(u_3)$.

When there exists a $j\in T_{u v}\setminus (C(u_3)\cup T_{u_4})$, say $7\in T_3\cap T_4$,
it follows that $2\in C(u_3)$ and $H$ contains a $(2, j)_{(u_2, u_3)}$-path for each $j\in T_3$, $T_3\subseteq C(u_2)$,
and from Corollary~\ref{auv2} (5.1) that $3\in S_u$.
It follows from Corollary~\ref{auv2} (3) that $4\in B_1\cup B_3$, mult$_{S_u}(4)\ge 2$,
and from Corollary~\ref{auv2} (4) that $5\in C(u_1)$ and mult$_{S_u}(5)\ge 2$.
One sees from Corollary~\ref{auv2d(u)5} (7.1) that $2\in C(u_1)$ or $1\in S_u$,
and from $5, 2/4\in C(u_1)$ that $t_1\ge 2$.
Then $\sum_{x\in N(u)\setminus \{v\}}d(x)\ge \sum_{i\in [1, 4]}w_i + 2t_{uv} + t_0\ge |\{1\}| + |\{2\}| + |\{3, 5, 2\}| + |\{4, 6\}| + |\{4, 4, 5, 5, 6, 3, 2/1\}| + 2t_{uv} + t_0 = 2\Delta + 12 + t_0$.
It follows that there exists a $8\in T_1\setminus (C(u_2)\cap C(u_4))$, and then follows from Corollary~\ref{auv2} (5.1) that $1\in S_u$.

If mult$_{S_u\setminus C(u_4)}(i) = 2$ for some $i\in \{3, 6\}$,
one sees that $\sum_{x\in N(u)\setminus \{v\}}d(x)\ge \sum_{i\in [1, 4]}w_i + 2t_{uv} + t_0\ge |\{1\}| + |\{2\}| + |\{3, 5, 2\}| + |\{4, 6\}| + |\{4, 4, 5, 5, 6, 3, 1, 6/3\}| + 2t_{uv} + t_0 = 2\Delta + 13 + t_0$,
then it follows $2\not\in C(u_1)\cup C(u_4)$, $T_1\subseteq C(u_4)\setminus C(u_2)$, a contradiction to Corollary~\ref{auv2} (5.2).
Otherwise, mult$_{S_u\setminus C(u_4)}(6) = 1$ and $6\in (C(u_1)\cup C(u_3))\setminus C(u_2)$.
One sees that $3/6\in C(u_1)$ and if $3\not\in C(u_4)$,
then it follows from Corollary~\ref{auv2} (4) that mult$_{S_u\setminus C(u_4)}(6)\ge 2$.
Hence, $3\in C(u_4)\setminus C(u_1)$ and then $6\in C(u_1)$.
Since $H$ contains a $(1, 6)_{(u_3, u_1)}$-path, $(u u_2, u u_3, u v)\overset{\divideontimes}\to (6, 7, 8)$.

In the other case, $T_{uv}\subseteq C(u_3)\cup C(u_4)$.
Let $C(u_3) = W_3\cup T_4\cup C_{34}$, $C(u_4) = W_4\cup T_3\cup C_{34}$.
One sees from Corollary~\ref{auv2}(4) that if mult$_{S_u\setminus C(u_3)}(5)\le 1$,
then $6\in C(u_3)$.

First, assume that $5\not\in S_u\setminus C(u_3)$.
Then $6\in T_3$.
It follows that $3\in C(u_1)$ and $H$ contains a $(3, 5)_{(u_1, u_3)}$-path.
If $H$ contains no $(2, j)_{(u_3, u_2)}$-path for some $j\in T_3$,
then $(u u_4, u u_3)\to (5, j)$ reduces the proof to (1.2.1.)
If $H$ contains no $(4, j)_{(u_4, u_j)}$-path for some $j\in T_4$,
then $(u u_2, u u_1, u v)\overset{\divideontimes}\to (5, j, T_3)$.
Otherwise, $2\in C(u_3)$ and $H$ contains a $(2, j)_{(u_3, u_2)}$-path for each $j\in T_3$,
$4\in C(u_1)$ and $H$ contains a $(4, j)_{(u_4, u_j)}$-path for each $j\in T_4$,
and $T_3\subseteq C(u_1)$, $T_1\subseteq C(u_4)$.
It follows that $T_4\subseteq C(u_1)$ and $T_1\subseteq C_{34}$.
If $4\not\in C(u_2)\cup C(u_3)$,
then $(u u_4, u v)\overset{\divideontimes}\to (5, 4)$, or $(u u_4, u u_3)\to (5, 4)$ reduces the proof to (1.2.1.).
Otherwise, $4\in C(u_2)\cup C(u_3)$.
One sees from Corollary~\ref{auvge2}(7.1) that $2\in C(u_1)$ or $1\in S_u$.
Since $t_0\ge t_3\ge 3$, $\sum_{x\in N(u)\setminus \{v\}}d(x)\ge \sum_{i\in [1, 4]}w_i + 2t_{u v} + t_0
\ge |\{1, 3, 4\}| + |\{2\}| + |\{3, 5, 2, 6\}| + |\{4, 6\}| + |\{4, 2/1\}| + 2t_{u v} + t_0 = 2\Delta + 10 + t_0\ge 2\Delta + 10 + t_3\ge 2\Delta + 13$.
Then $t_0 = t_3 = 3$ and $T_1\setminus C(u_2)\ne \emptyset$.
It follows from Corollary~\ref{auv2}(5.2) that $2\in C(u_1)\cup C(u_4)$,
and from Corollary~\ref{auv2}(5.1) that $1\in S_u$.
Thus, $\sum_{x\in N(u)\setminus \{v\}}d(x)\ge \sum_{i\in [1, 4]}w_i + 2t_{u v} + t_0\ge |\{1, 3, 4\}| + |\{2\}| + |\{3, 5, 2, 6\}| + |\{4, 6\}| + |\{4, 2, 1\}| + 2t_{u v} + t_0 \ge 2\Delta + 14$, a contradiction.

Next, assume that $5\in S_u\setminus C(u_3)$.
Recall that $6\in C(u_1)\cup C(u_3)$.
\begin{itemize}
\parskip=0pt
\item
     $H$ contains no $(6, j)_{(u_3, v)}$-path for some $j\in T_3$.
     It follows from Corollary~\ref{auv2}(4) that $5\in B_1\cap C(u_1)$ and $5\in C(u_2)\cup C(u_4)$.
     By Corollary~\ref{auv2d(u)5} (7.1), $2, 4\in C(u_1)$ or $1\in S_u$, $2, 4\in C(u_3)$,
		or $3\in S_u$, and $\sum_{x\in N(u)\setminus \{v\}}d(x)\ge \sum_{i\in [1, 4]}w_i + 2t_{u v} + t_0
		\ge |\{1, 5, a_1\}| + |\{2\}| + |\{3, 5, a_3\}| + |\{4, 6\}| + |\{5, 6\}| + |\{2, 4\}\setminus \{a_1\}/1, \{2, 4\}\setminus\{a_3\}/3\}|+ 2t_{u v} + t_0 \ge 2\Delta + 11 + t_0$.

     \quad When $T_3\subseteq C(u_2)$.
     Then $t_0\ge t_3\ge 2$ and $\sum_{x\in N(u)\setminus \{v\}}d(x) = 2\Delta + 13$.
     If follows that $t_0 = t_3 = 2$, $W_3 = \{3, 5, a_3\}$, and $6\not\in C(u_2)\cup C(u_3)$.
     Since $6/1/2, 2/4\in C(u_3)$, $a_3 = 2$.
     Then $u u_3\to 6$ reduces to (1.2.1.).
     In the other case, $T_3\setminus C(u_2)\ne \emptyset$. Then $4\in C(u_3)$.
     It follows from Corollary~\ref{auv2}(5.2) that mult$_{S_u}(2)\ge 2$,
     and Corollary~\ref{auv2}(5.1) that $3\in S_u$.
     Then $\sum_{x\in N(u)\setminus \{v\}}d(x)\ge\sum_{i\in [1, 4]}w_i + 2t_{u v} + t_0\ge |\{1, 5\}| + |\{2\}| + |\{3, 5, 4\}| + |\{4, 6\}| + |\{5, 6, 2, 2, 3, 1/4\}| + 2t_{u v} + t_0
		\ge 2\Delta + 12 + t_0$. It follows that there exists a $j\in T_1$ such that mult$_{S_u}(j) = 2$,
     and follows from Corollary~\ref{auv2}(5.1) that $1\in S_u$.
     One sees that if $4\not\in C(u_1)$, then $T_1\subseteq C(u_1)$, $T_1\setminus C(u_4)\ne \emptyset$,
     and it follows from Corollary~\ref{auv2}(5.2) that mult$_{S_u}(4)\ge 2$.
     Hence, $\sum_{x\in N(u)\setminus \{v\}}d(x)\ge\sum_{i\in [1, 4]}w_i + 2t_{u v} + t_0\ge |\{1, 5\}| + |\{2\}| + |\{3, 5, 4\}| + |\{4, 6\}| + |\{5, 6, 2, 2, 3, 1, 4\}| + 2t_{u v} + t_0 \ge 2\Delta + 13 + t_0$.
     Then $t_0 = 0$ and $\sum_{x\in N(u)\setminus \{v\}}d(x) = 2\Delta + 13$.
     If follows that mult$_{S_u\setminus C(u_4)}(6) = 1$ and $1/6\in C(u_3)$, $3/6\in C(u_1)$.
     One sees clearly from $d(u_1)\ge t_3 + |\{1, a_1, 5, 1/6\}|\ge 7$ that $d(u_3) + d(u_4)\le \Delta + 7$,
     and if $T_1\subseteq C(u_4)$, then $c_{3 4}\ge t_1\ge 3$ and $d(u_3) + d(u_4)\ge |\{3, 5, 4, 1/6\}| + |\{4, 6\}| + t_{u v} + c_{3 4}\ge \Delta + 8$
     Hence, $T_1\setminus C(u_4)\ne \emptyset$ and $2\in C(u_1)$, $H$ contains a $(2, j)_{(u_1, u_2)}$-path for each $j\in T_1$.
     If $6\not\in C(u_1)$, then $H$ contains a $(3, 6)_{(u_1, u_3)}$-path,
     and $u u_2\to 6$, $u u_1\to T_1\setminus C(u_4)$ reduces the proof to (1.2.2.).
     Otherwise, $6\in C(u_1)$ and $1\in C(u_3)$.
     It follows from Corollary~\ref{auv2}(5.1) that $3\in C(u_2)\setminus (C(u_1)\cup C(u_4))$, and $H$ contains a $(2, 3)_{(u_2, u_4)}$-path.
     Then $(u u_1, u u_3)\to (3, 6)$ reduces the proof to (1.2.1.).
\item
     $H$ contains a $(6, j)_{(u_3, v)}$-path for each $j\in T_3$.
     It follows from Lemma~\ref{auvge2} (1.2) that $T_3\subseteq C(v_3)$.
     Hence, $\{3\}\cup T_3\cup T_4\subseteq C(v_3)$.
     It follows from \ref{auv2Cvi1vi2} that $5/6\in C(v_3)$ and there exists a $j\in C_{3 4}\cap R_3\cap B_1$,
     say $7\in C_{3 4}\cap R_3\cap B_1$.
     One sees clearly that mult$_{S_u}(7)\ge 3$ and $t_0\ge 1$.

     \quad When $6\not\in C(u_1)\cup C(u_2)$,
     one sees that $3\in C(u_1)$ and $H$ contains a $(3, 6)_{(u_1, u_3)}$-path.
     If $H$ contains no $(4, j)_{(u_1, u_4)}$-path for some $j\in T_1$,
     then $(u u_2, u u_1)\to (6, j)$ reduces the proof to (1.2.2.).
     Otherwise, $4\in C(u_1)$ and $H$ contains a $(4, j)_{(u_1, u_4)}$-path.
     Thus, $T_4\subseteq C(u_1)$, $T_1\subseteq C_{3 4}$ and $c_{3 4}\ge t_1\ge 2$.
     Then $d(u_3) + d(u_4) = |\{3, 5, 6, a_3\}| + |\{4, 6\}| + t_{u v} + c_{3 4} = \Delta + 5 + c_{3 4}\ge \Delta + 5 + t_1$.
     Since $d(u_1)\ge t_3 + t_4 + |\{1, 3, 4\}|\ge 3 + 1 + 3 = 7$,
     $d(u_3) + d(u_4) = \Delta + 7$.
     It follows that $c_{3 4} = t_1 = 2$ and $W_4 = \{4, 6\}$.
     One sees that $H$ contains no $(6, j)_{(u_4, v)}$-path for each $j\in T_4$,
     and it follows from Corollary~\ref{auv2}(4) that mult$_{S_u}(6)\ge 6$, a contradiction.

     \quad In the other case, $6\in C(u_1)\cup C(u_2)$.
     Then $\sum_{x\in N(u)\setminus \{v\}}d(x)\ge \sum_{i\in [1, 4]}w_i + 2t_{uv} + t_0
		\ge |\{1, a_1\}| + |\{2\}| + |\{3, 5, 6, a_3\}| + |\{4, 6\}| + |\{5, 6\}| + |\{2, 4\}\setminus \{a_1\}/1, \{2, 4\}\setminus\{a_3\}/3\}|+ 2t_{u v} + t_0 \ge 2\Delta + 11 + t_0$.
     Since $t_3\ge 3$, there exists a $j\in T_3\setminus C(u_2)$ and $4\in C(u_3)$.
     It follows from Corollary~\ref{auv2}(5.2) that mult$_{S_u}(2)\ge 2$,
     and from Corollary~\ref{auv2}(5.1) that $3\in S_u$.
     By Corollary~\ref{auv2d(u)5} (7.1), $4\in C(u_1)$ or $1\in S_u$,
     and then $\sum_{x\in N(u)\setminus \{v\}}d(x)\ge\sum_{i\in [1, 4]}w_i + 2t_{u v} + t_0\ge |\{1\}| + |\{2\}| + |[3, 6]| + |\{4, 6\}| + |\{5, 6, 2, 2, 3, 1/4\}| + 2t_{u v} + t_0 \ge 2\Delta + 12 + t_0\ge 2\Delta + 13$.
     Hence, there exists a $j\in T_1$ such that mult$_{S_u}(j) = 2$.
     It follows from Corollary~\ref{auv2}(5.1) that $1\in S_u$,
     and from Corollary~\ref{auv2}(5.2) that $4\in S_u\setminus C(u_3)$.
     Thus, $\sum_{x\in N(u)\setminus \{v\}}d(x)\ge\sum_{i\in [1, 4]}w_i + 2t_{u v} + t_0\ge |\{1\}| + |\{2\}| + |[3, 6]| + |\{4, 6\}| + |\{5, 6, 2, 2, 3, 1, 4\}| + 2t_{u v} + t_0\ge 2\Delta + 14$,
     a contradiction.
\end{itemize}

Case 1.3. $\{i_1, i_2\} = \{1, 2\}$.
Then $T_{u v}\subseteq C(u_1)\cup C(u_2)$.
One sees clearly that if $v_1\in \{u_3, u_4\}$, then $R_1\ne \emptyset$.
When $v_1\not\in \{u_3, u_4\}$,
if for some $j\in [3, 4]$, $j\not\in C(v_1)$ and $H$ contains no $(j, i)_{(v_1, v_i)}$-path for each $i\in \{2, 5, 6\}$,
then then $v v_1\to j$ reduces to (1.2.).
Otherwise, We proceed with the following proposition, or otherwise we are done:
\begin{proposition}
\label{R12notempty}
For each $i\in [1, 2]$, $R_i\ne \emptyset$, there exists a $b_i\in [5, 6]\cap C(v_i)$,
and if $v_{i_1}\not\in \{u_3, u_4\}$, where $[1, 2] = \{i_1, i_2\}$, then
\begin{itemize}
\parskip=0pt
\item[{\rm (1)}]
	for each $j\in [3, 4]$, $j\in C(v_{i_1})$ or $H$ contains a $(j, i)_{(v_{i_1}, v_i)}$-path for some $i\in \{i_2, 5, 6\}$;
\item[{\rm (2)}]
	$3/i_2/5/6, 4/i_2/5/6\in C(v_{i_1})$.
\end{itemize}
\end{proposition}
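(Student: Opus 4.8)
The plan is to establish the three assertions in turn, in each case showing that whenever the assertion fails we can finish (extend the coloring to $G$ or reduce to an earlier case), so that the proposition holds precisely on the branch where we are not yet done. Throughout I would keep in mind that in Case~1 we have $d(v)=5$, hence $d_H(v)=4$ and $C(v)=\{i_1,i_2,5,6\}$, that $t_{uv}=\Delta-1$, and that $T_{uv}\subseteq C(u_{i_1})\cup C(u_{i_2})$; also $D_i=C(v_i)\cap C(v)\subseteq\{i_1,i_2,5,6\}$ for $i\in\{i_1,i_2\}$. Here $|C(v_i)|=d(v_i)\le\Delta$ since the coloring is proper.

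First I would prove $R_i\neq\emptyset$ for each $i\in[1,2]$, distinguishing whether $v_i\in\{u_3,u_4\}$. If $v_i\in\{u_3,u_4\}$, then $v_i$ is incident both to the edge $vv_i$ of color $i\in\{1,2\}$ and to an edge $uv_i$ of color in $\{3,4\}$; these are two distinct colors of $C(v_i)$ lying outside $T_{uv}$, so $|C(v_i)\cap T_{uv}|\le d(v_i)-2\le\Delta-2<\Delta-1=|T_{uv}|$, forcing $T_{uv}\not\subseteq C(v_i)$, i.e.\ $R_i\neq\emptyset$. If $v_i\notin\{u_3,u_4\}$ and $R_i=\emptyset$, then $\{i\}\cup T_{uv}\subseteq C(v_i)$ gives $d(v_i)\ge\Delta$, whence $C(v_i)=\{i\}\cup T_{uv}$; in particular $3\notin C(v_i)$ and, writing $i'$ for the other index, none of $i',5,6$ lies in $C(v_i)$, so $H$ contains no $(3,l)_{(v_i,v_l)}$-path for any $l\in\{i',5,6\}$. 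Recoloring $vv_i\to 3$ then creates no bichromatic cycle, and since it leaves the degrees of $u$ and $v$ unchanged while $A_{uv}$ becomes $\{3,i'\}$, it reduces the instance to Case~(1.2.); thus we are done, contradicting that we are on the undone branch. Hence $R_i\neq\emptyset$ in all cases.

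Given $R_i\neq\emptyset$, the color $b_i$ is produced by Lemma~\ref{auvge2}(1.2): picking any $j\in R_i$, there is a $(b_i,j)_{(v_i,v_{b_i})}$-path with $b_i\in D_i\setminus\{i,i_j\}$ and $i_j\in A_{uv}$. Since $B_i\subseteq C(v_i)$ while $j\notin C(v_i)$, we have $j\notin B_i$; as $j\in B_{i_1}\cup B_{i_2}$ by \ref{prop3001}, necessarily $i_j=i'$. Therefore $b_i\in D_i\setminus\{i_1,i_2\}\subseteq\{5,6\}$, and $b_i\in D_i\subseteq C(v_i)$, giving $b_i\in[5,6]\cap C(v_i)$ as claimed. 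Next, for $v_{i_1}\notin\{u_3,u_4\}$ I would read (1) off the recoloring criterion stated just before the proposition: for each $j\in[3,4]$, if $j\notin C(v_{i_1})$ and $H$ has no $(j,l)_{(v_{i_1},v_l)}$-path for every $l\in\{i_2,5,6\}$, then $vv_{i_1}\to j$ reduces to Case~(1.2.) and we are done; since we are not done, (1) holds. Then (2) follows from (1): fixing $j\in\{3,4\}$, either $j\in C(v_{i_1})$, or the guaranteed $(j,l)_{(v_{i_1},v_l)}$-path with $l\in\{i_2,5,6\}$ must begin at $v_{i_1}$ with color $l$ (because $j\notin C(v_{i_1})$), so $l\in C(v_{i_1})$; in either case $3/i_2/5/6\in C(v_{i_1})$ for $j=3$ and $4/i_2/5/6\in C(v_{i_1})$ for $j=4$.

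The main obstacle is the non-adjacency subcase of $R_i\neq\emptyset$: one must verify carefully that when $C(v_i)=\{i\}\cup T_{uv}$ the recoloring $vv_i\to 3$ is genuinely legitimate---that color $3$ clashes with no other edge at $v$ and that no $(3,c)$-cycle arises, which hinges on $\{i_2,5,6\}\cap T_{uv}=\emptyset$ and on the absence of the relevant two-colored paths through $v_i$---and that the resulting coloring indeed matches the hypotheses of Case~(1.2.). The remaining steps are routine bookkeeping with the definitions of $B_i$, $D_i$ and $R_i$ together with Lemma~\ref{auvge2}.
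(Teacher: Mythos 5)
Your proposal is correct and follows essentially the same route as the paper: the paper's justification is precisely the counting observation for $v_i\in\{u_3,u_4\}$ ("one sees clearly"), the recoloring $vv_{i_1}\to j$ reducing to Case~(1.2.) when the path condition fails (which simultaneously yields $R_i\ne\emptyset$ in the non-adjacent case and assertions (1)--(2)), and an appeal to Lemma~\ref{auvge2}(1.2) together with $i_j\ne i$ (forced by $j\in R_i$) to place $b_i$ in $D_i\setminus\{i_1,i_2\}\subseteq[5,6]$. You have merely made explicit the details the paper leaves implicit, and your verification that the recoloring is legitimate (using $\{i_2,5,6\}\cap T_{uv}=\emptyset$ and $C(v_i)=\{i\}\cup T_{uv}$) is exactly what the paper's reduction relies on.
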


For $j\in R_1$ and $H$ contains a $(b_{1j}, j)_{(v_1, v_{b_{1 j}})}$-path,
if $b_{1 j}\not\in C(u_2)$, and $H$ contains no $(b_{1 j}, i)_{(u_2, u_i)}$-path for each $i\in \{1, 3, 4\}$,
then $(u u_2, u v)\overset{\divideontimes}\to (b_{1 j}, j)$.
Hence, We proceed with the following proposition, or otherwise we are done:
\begin{proposition}
\label{T12notempty}
Let $j\in R_{i_1}$ and $H$ contains a $(b_{i_1 j}, j)_{(v_{i_1}, v_{b_{i_1 j}})}$-path. Then
\begin{itemize}
\parskip=0pt
\item[{\rm (1)}]
	$b_{i_1j}\in C(u_{i_2})$, or $H$ contains a $(b_{i_1 j}, i)_{(u_{i_2}, u_i)}$-path for some $i\in \{i_1, 3, 4\}$;
\item[{\rm (2)}]
	$b_{i_1j}/i_1/3/4\in C(u_{i_2})$ and $T_{i_2}\ne \emptyset$;
\item[{\rm (3)}]
	there exists a $a_{i_2}\in [3, 4]\cap C(u_{i_2})$.
\end{itemize}
\end{proposition}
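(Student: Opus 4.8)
The plan is to read Proposition~\ref{T12notempty} as a short logical consequence of the recoloring move stated immediately before it, together with Corollary~\ref{auv2}. For part~(1) I would argue by contraposition. The preceding sentence already establishes that if $b_{i_1 j}\notin C(u_{i_2})$ and $H$ contains no $(b_{i_1 j}, i)_{(u_{i_2}, u_i)}$-path for each $i\in\{i_1,3,4\}$, then the move $(uu_{i_2}, uv)\overset{\divideontimes}{\to}(b_{i_1 j}, j)$ completes an acyclic edge $(\Delta+5)$-coloring of $G$. Since we are in the ``otherwise we are done'' branch, that move does not apply, so its hypothesis must fail; negating it gives exactly (1), namely $b_{i_1 j}\in C(u_{i_2})$ or $H$ contains a $(b_{i_1 j}, i)_{(u_{i_2}, u_i)}$-path for some $i\in\{i_1,3,4\}$.

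For part~(2) I would first turn the path clause of (1) into membership in $C(u_{i_2})$. If $b_{i_1 j}\in C(u_{i_2})$ there is nothing to do; otherwise there is a $(b_{i_1 j}, i)_{(u_{i_2}, u_i)}$-path for some $i\in\{i_1,3,4\}$, and because $b_{i_1 j}\notin C(u_{i_2})$ the edge of this path incident to $u_{i_2}$ must be colored $i$, forcing $i\in C(u_{i_2})$. In either case $b_{i_1 j}/i_1/3/4\in C(u_{i_2})$. To obtain $T_{i_2}\neq\emptyset$ I would count: recall from \ref{R12notempty} that $b_{i_1 j}\in[5,6]$ (indeed $b_{i_1 j}\in D_{i_1}\setminus\{i_1,i_2\}\subseteq[5,6]$ since $C(v)=\{i_1,i_2,5,6\}$), so the witnessing color and the color $i_2=c(uu_{i_2})$ are two distinct members of $C(u_{i_2})$ lying outside $T_{uv}$. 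Hence $|C(u_{i_2})\cap T_{uv}|\le d(u_{i_2})-2\le\Delta-2<\Delta-1=t_{uv}=|T_{uv}|$, so $T_{i_2}=T_{uv}\setminus C(u_{i_2})\neq\emptyset$.

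For part~(3), with $T_{i_2}\neq\emptyset$ in hand, I would invoke Corollary~\ref{auv2}(1.2) with the roles of $i_1$ and $i_2$ interchanged, which is legitimate because $i_1$ and $i_2$ are symmetric elements of $A_{uv}$. For any $j'\in T_{i_2}$ this produces $a_{i_2}\in\{i_3,i_4\}\cap C(u_{i_2})$ admitting a $(a_{i_2}, j')_{(u_{i_2}, u_{a_{i_2}})}$-path. Since $A_{uv}=\{i_1,i_2\}=\{1,2\}$ forces $U_{uv}=\{i_3,i_4\}=\{3,4\}$, this reads $a_{i_2}\in[3,4]\cap C(u_{i_2})$, which is precisely (3).

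I expect the main obstacle to be not the logical skeleton, which is brief, but the correct importing of the two supporting facts on which (2) and (3) hinge. The first is that $b_{i_1 j}\in[5,6]$ and $\{i_3,i_4\}=\{3,4\}$, so the index bookkeeping in the counting bound and in the symmetric application of Corollary~\ref{auv2} must be verified with care. The second is the cycle-freeness of the move $(uu_{i_2},uv)\overset{\divideontimes}{\to}(b_{i_1 j},j)$ underlying part~(1); this rests on $j\in R_{i_1}$ (so $j\notin C(v_{i_1})$, and the existing $(b_{i_1 j},j)$-path must leave $v_{i_1}$ through color $b_{i_1 j}$) together with Lemma~\ref{lemma06} controlling the two colors $C(u)\cap C(v)=\{i_1,b_{i_1 j}\}$ that remain after the swap. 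Since the surrounding text already asserts this move, the delicate point is simply to confirm that its stated premise negates cleanly into clause~(1) rather than to re-prove the move itself.
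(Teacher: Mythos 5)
Your proposal is correct and follows the paper's (largely implicit) argument: the paper justifies clause (1) only by the recoloring move $(uu_{i_2},uv)\overset{\divideontimes}{\to}(b_{i_1j},j)$ stated immediately before the proposition, and leaves (2) and (3) as unstated consequences, which you fill in exactly as intended — membership of a non-$T_{uv}$ color beside $i_2$ in $C(u_{i_2})$ gives $|C(u_{i_2})\cap T_{uv}|\le\Delta-2<t_{uv}=\Delta-1$, hence $T_{i_2}\ne\emptyset$, and then Corollary~\ref{auv2}(1.2) with $i_1,i_2$ interchanged yields $a_{i_2}\in\{i_3,i_4\}\cap C(u_{i_2})=[3,4]\cap C(u_{i_2})$. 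Your bookkeeping that $b_{i_1j}\in[5,6]$ (from $D_{i_1}\setminus\{i_1,i_j\}$ with $i_j=i_2$ forced by $j\in R_{i_1}$) is also the right justification for why the two reserved colors lie outside $T_{uv}$.
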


For each $i\in [1, 2]$, it follows from \ref{R12notempty} that there exists a $b_i\in [5, 6]\cap C(v_i)$,
and follows from \ref{T12notempty} that there exists a $a_i\in [3, 4]\cap C(u_i)$.

(1.3.1) $(vu_3, vu_4)_c = (1, 2)$.
It follows that $T_{uv}\subseteq C(u_3)\cup C(u_4)$.
Let $C(u_3) = W_3\cup T_4\cup C_{34}$, $C(u_4) = W_4\cup T_3\cup C_{34}$,
where $\{3, 1, b_1\}\subseteq W_3$, $\{4, 2, b_2\}\subseteq W_4$.
Then $T_4\subseteq C(u_1)$ and $T_3\subseteq C(u_2)$.

Note that $H$ contains a $(2, j)_{(u_2, u_3)}$-path for each $j\in T_3$ and a $(1, l)_{(u_1, u_4)}$-path for each $l\in T_4$.
It follows from Corollary~\ref{auv2}(4) that if $H$ contains no $(4, j)_{(u_3, u_4)}$-path for some $j\in T_3$,
then $3\in B_2\subseteq C(u_2)\cap C(u_4)$,
and if $H$ contains no $(3, j)_{(u_3, u_4)}$-path for some $j\in T_4$,
then $4\in B_1\subseteq C(u_1)\cup C(u_3)$.
Hence, by symmetry, assume w.l.o.g. $4\in C(u_3)$.
Note that $t_3\ge 3$, $t_4\ge 2$.

When $3\not\in C(u_4)$, then $4\in C(u_1)$.
If $H$ contains no $(4, j)_{(u_4, u_3)}$-path for some $j\in T_3$,
then $(u u_3, u v)\overset{\divideontimes}\to (j, 3)$.
Otherwise, $H$ contains a $(4, j)_{(u_4, u_3)}$-path for each $j\in T_3$.
It follows from Corollary~\ref{auv2}(1.2) that $T_3\subseteq C(u_1)$.
Since $\{4\}\cup T_4\cup T_3\subseteq T_1$, we have $C_{3 4}\ne \emptyset$ and $d(u_1)\ge 7$.
It follows that $d(u_3) + d(u_4)\le \Delta + 7$.
Since $d(u_3) + d(u_4)\ge |\{1, 3, 4, b_1\}| + |\{2, 4, b_2\}| + t_{u v} + c_{3 4}\ge \Delta + 6 + c_{3 4}\ge \Delta + 7$, we have $d(u_3) + d(u_4) = \Delta + 7$, $c_{3 4} = 1$, and $W_3 = \{1, 3, 4, b_1\}$, $W_4 = \{2, 4, b_2\}$.
If there exists an $i\in [5, 6]\setminus (C(u_3)\cup C(u_4))$,
then $(u u_3, u v)\overset{\divideontimes}\to (i, 3)$.
Otherwise, assume w.l.o.g. $b_1 = 5$, $b_2 = 6$.
One sees from \ref{T12notempty}(1) that $H$ contains a $(4, 6)_{(u_1, u_4)}$-path.
Then $(u u_3, u v)\overset{\divideontimes}\to (6, 3)$.

In the other case, $3\in C(u_4)$.
Then $t_3\ge 3$,
$d(u_3) + d(u_4)\ge |\{1, 3, 4, b_1\}| + |\{2, 4, 3, b_2\}| + t_{u v} + c_{3 4}\ge \Delta + 7 + c_{3 4}\ge \Delta + 7$.
Since $\min\{d(u_1), d(u_2)\}\le 6$, assume w.l.o.g. $d(u_1)\le 6$.
One sees clearly that $T_3\ne \emptyset$.
Then, $4\in C(u_1)$ and $H$ contains a $(4, j)_{(u_4, u_1)}$-path for each $j\in T_3\cap T_1$.
It follows from Corollary~\ref{auv2}(3) that $3\in B_2\cap C(u_2)$,
and from Corollary~\ref{auv2}(5.1) that $5, 6\in S_u$, $1\in C(u_4)$ or $H$ contains a $(1, 2)_{(u_2, u_3)}$-path.
One sees that if $1\not\in C(u_4)$, then $2\in C(u_3)$ and $1\in C(u_2)$.
Since $d(u_3) + d(u_4)\ge |\{1, 3, 4, b_1\}| + |\{2, 4, 3, b_2\}| + |\{1/2\}| + t_{uv} + c_{34}\ge \Delta + 8 + c_{34}\ge \Delta + 8$,
we have $d(u_3) + d(u_4) = \Delta + 8$, $c_{34} = 0$, and $W_3 \biguplus W_4 = \{1, 3, 4, b_1\}\cup \{2, 4, b_2\}\cup \{1/2\}$.

It follows that $d(u_1) = d(u_2) = 6$.
One sees that if $1\not\in C(u_4)$, then $2\in C(u_3)$, $1\in C(u_2)$,
and $d(u_1)\ge |[1, 3]| + t_3\ge 3 + 4 = 7>6$.
Hence, $1\in C(u_4)$, $t_4\ge 4$ and $C(u_1) = \{1, 4\}\cup T_4$.

If $1, 4\not\in C(u_1)$, then $(u u_1, u u_2, u v)\overset{\divideontimes}\to (1, 2, T_3)$.
Otherwise, $1/4\in C(u_1)$ and $C(u_1) = \{2, 3, 1/4\}\cup T_3$.
It follows that $\{b_1, b_2\} = \{5, 6\}$.
Assume w.l.o.g. $b_1 = 5$, $b_2 = 6$.
One sees from \ref{T12notempty}(1) that $H$ contains a $(4, 6)_{(u_1, u_4)}$-path.
Then $(u u_3, u u_2, u v)\overset{\divideontimes}\to (6, T_4, 3)$.

(1.3.2) $(v u_3, v u_4)_c = (5, 2)$.
It follows from Corollary~\ref{auv2d(u)5}(2.1) that $2\in S_u\setminus C(u_4)$, or $1\in C(u_2)$,
or $H$ contains a $(1, i)_{(u_2, u_)}$-path for some $i\in [3, 4]$.

First, assume that there exists $j\in T_{uv}\setminus (C(u_3)\cup C(u_4))$, say $7\not\in C(u_3)\cup C(u_4)$,
then $7\in B_1\subseteq C(u_1)$.
It follows from Lemma~\ref{auvge2}(1.2) that $6\in C(u_4)$ and $H$ contains a $(6, 7)_{(u_4, v_6)}$-path,
from Corollary~\ref{auv2}(1.2) that $7\in C(u_2)$,
from Corollary~\ref{auv2}(4) that $5\in B_1\cup B_2$ and mult$_{S_u}(5)\ge 2$,
and from Corollary~\ref{auv2}(3) that $4\in B_1$, mult$_{S_u}(4)\ge 2$.
If $3\not\in S_u$, then $(u u_2, u u_3, u v)\overset{\divideontimes} \to (3, 7, T_1)$.
Otherwise, $3\in S_u$.
One sees from Corollary~\ref{auv2}(3) that if $2\not\in C(u_3)$, then $3\in B_1\cup B_2$, mult$_{S_u}(3)\ge 2$.

When $6\not\in S_u\setminus C(u_4)$, it follows from \ref{T12notempty}(1) that $H$ contains a $(4, 6)_{(u_1, u_4)}$-path.
If $2\not\in C(u_1)$ and $H$ contains no $(2, 3)_{(u_1, u_3)}$-path,
then $(u u_2, u u_1, u v)\overset{\divideontimes} \to (6, 2, 7)$.
Otherwise, $2\in C(u_1)$ or $H$ contains a $(2, 3)_{(u_1, u_3)}$-path.
It follows that $2\in S_u\setminus C(u_4)$ and $2/3\in C(u_1)$, $t_1\ge 2$.
Since $(u u_3, u u_2)\to (7, 6)$ reduces to an acyclic edge $(\Delta + 5)$-coloring for $H$ such that $C(u)\cap C(v) = \{1, 6\}$, it follows from Lemma~\ref{auvge2}(2) that mult$_{S_u}(3)\ge 2$.
One sees that $|\{1, 2, 3, 4, 5, 2, 6\}| + \{5, 5, 4, 4\} = 11$.
Then $\sum_{x\in N(u)\setminus \{v\}}d(x)\ge\sum_{i\in [1, 4]}w_i + 2t_{u v} + t_0\ge 11 + |\{3, 2, 3\}| + 2t_{u v} + t_0 \ge 2\Delta + 12 + t_0$.
It follows that $T_1\setminus T_0\ne \emptyset$ and $1\in S_u$.
Hence, $\sum_{x\in N(u)\setminus \{v\}}d(x)\ge 2\Delta + 13 + t_0$.
It follows that $\sum_{x\in N(u)\setminus \{v\}}d(x) = 2\Delta + 13$, and $t_0 = 0$.
If $H$ contains no $(4, j)_{(u_2, u_4)}$-path for some $j\in T_2$,
then $(u u_3, u u_2)\to (6, T_2)$ reduces the proof to (2.1.2.3).
Otherwise, $4\in C(u_2)$, and $T_2\subseteq C(u_4)$.
Then $d(u_1) + d(u_2) + d(u_3)\ge |\{1, 2, 4, 2, 6\}| + 2t_{uv} + 2\times |\{3, 4, 5\}| = 2\Delta + 9$
and $d(u_3)\le 4$, a contradiction.

In the other case, $6\in S_u\setminus C(u_4)$.
If $1\not\in S_u$, then $2\in S_u\setminus C(u_4)$,
and it follows from Corollary~\ref{auv2d(u)5}(7.2) that $3\in C(u_1)$ and $T_1\subseteq C(u_3)$, i.e., $t_0\ge t_1\ge 2$, thus $\sum_{x\in N(u)\setminus \{v\}}d(x)
\ge\sum_{i\in [1, 4]}w_i + 2t_{u v} + t_0\ge 11 + + |\{3, 6, 2\}| + 2t_{u v} + t_0 \ge 2\Delta + 12 + t_0\ge 2\Delta + 12 + 2\ge 2\Delta + 14$.

Otherwise, $1\in S_u$. If $2\not\in S_u\setminus C(u_4)$, then mult$_{S_u}(3)\ge 2$,
and $\sum_{x\in N(u)\setminus \{v\}}d(x)\ge\sum_{i\in [1, 4]}w_i + 2t_{u v} + t_0\ge 11 + |\{3, 6, 1, 3\}| + 2t_{u v} + t_0 \ge 2\Delta + 13 + t_0$.
It follows that $\sum_{x\in N(u)\setminus \{v\}}d(x) = 2\Delta + 13$ and $t_0 = 0$.
Since it follows from Corollary~\ref{auv2d(u)5}(7.2) that $4\in C(u_2)$ and $T_2\subseteq C(u_4)$,
thus, $d(u_1) + d(u_2) + d(u_3)\ge |\{1, 2, 4, 2, 6\}| + 2t_{u v} + 2\times |\{3, 4, 5\}| = 2\Delta + 9$
and $d(u_3)\le 4$, a contradiction.
Now, assume $2\in S_u\setminus C(u_4)$.
One sees from Corollary~\ref{auv2}(5.2) that $t_0\ge t_1\ge 1$ or mult$_{S_u}(3)\ge 2$.
Let $t^* = \max\{t_0, \mbox{mult}_{S_u}(3) - 1\}$.
Then $t^* \ge 1$ and $\sum_{x\in N(u)\setminus \{v\}}d(x)\ge\sum_{i\in [1, 4]}w_i + 2t_{u v} + t_0\ge 11 + |\{3, 6, 1, 2\}| + 2t_{uv} + t^* \ge 2\Delta + 13 + t^* \ge 2\Delta + 14$, a contradiction.

Next, assume that $T_{u v}\subseteq C(u_3)\cup C(u_4)$.
Let $C(u_3) = W_3\cup T_4\cup C_{3 4}$, $C(u_4) = W_4\cup T_3\cup C_{34}$,
where $\{3, 5\}\subseteq W_3$, $\{4, 2, b_2\}\subseteq W_4$.
Then $T_4\subseteq C(u_1)$.

If $5\not\in C(u_1)$ and $H$ contains no $(5, i)_{(u_1, u_i)}$-path for each $i\in \{2, 4\}$,
then $u u_1\to 5$ reduces the proof to (2.1.3.1).
If $5\not\in C(u_2)$, $H$ contains no $(4, 5)_{(u_2, u_4)}$-path,
and $2\not\in C(u_1)$, $H$ contain no $(3, 2)_{(u_2, u_3)}$-path,
then $(u u_1, u u_2)\to (5, 2)$ reduces the proof to (2.1.3.1).
If $H$ contain neither a $(4, i)_{(u_4, v_i)}$-path for each $i\in \{1, 5, 6\}$,
nor a $(3, j)_{(u_3, u_4)}$-path for some $j\in T_4$,
then $(v u_4, u u_4)\to (4, j)$ reduces to acyclic edge $(\Delta + 5)$-coloring for $H$ such that $C(u)\cap C(v) = \{1\}$.

Otherwise, we proceed with the following proposition, or otherwise we are done:
\begin{proposition}
\label{245524}
\begin{itemize}
\parskip=0pt
\item[{\rm (1)}]
    $5\in C(u_1)$ or $H$ contains a $(5, i)_{(u_1, u_i)}$-path for some $i\in \{2, 4\}$;
\item[{\rm (2)}]
	$5\in C(u_2)$ or $H$ contains a $(4, 5)_{(u_2, u_4)}$-path,
    or $2\in C(u_1)$ or $H$ contain a $(3, 2)_{(u_2, u_3)}$-path;
\item[{\rm (3)}]
	$H$ contain a $(4, i)_{(u_4, v_i)}$-path for some $i\in \{1, 5, 6\}$,
    or a $(3, j)_{(u_3, u_4)}$-path for each $j\in T_4$.
\end{itemize}
\end{proposition}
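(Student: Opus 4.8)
The plan is to establish all three parts by contraposition, following exactly the three recoloring moves displayed immediately before the statement: for each part I assume its negation, apply the indicated recoloring of edges incident to $u$ or $v$, and verify that the outcome is a legitimate acyclic edge $(\Delta+5)$-coloring of $H$ that either falls under the already-treated sub-case (1.3.1) (where both $vu_3$ and $vu_4$ receive colors of $A_{uv}$) or attains $|C(u)\cap C(v)|\le 1$. Since either outcome terminates this branch, the fact that we are \emph{not} done forces the three positive assertions. Throughout I will use the standing data of this branch: in $H$ we have $C(u)=[1,4]$, $C(v)=\{1,2,5,6\}$, $(vu_3,vu_4)_c=(5,2)$, $T_{uv}\subseteq B_1\cup B_2\subseteq C(u_1)\cup C(u_2)$, and $T_4\subseteq C(u_1)$.

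For part (1) I recolor $uu_1\to 5$. This is proper exactly when $5\notin C(u_1)$, and the only bichromatic cycles it can introduce are $(5,x)$-cycles through $uu_1$ with $x\in\{2,3,4\}$, which require a $(5,x)_{(u_1,u_x)}$-path. The case $x=3$ is automatically excluded: the unique color-$5$ edge at $u_3$ is $vu_3$, so any $(5,3)$-path entering $u_3$ through color $5$ must pass through $v$, but $C(v)=\{1,2,5,6\}$ carries no color $3$, breaking the alternation. Hence the move is safe precisely when $5\notin C(u_1)$ and there is no $(5,i)_{(u_1,u_i)}$-path for $i\in\{2,4\}$, after which $C(u)\cap C(v)=\{2,5\}$ with $vu_3(=5)$ and $vu_4(=2)$ both in $A_{uv}$, reducing to (1.3.1); negating this is exactly (1). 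Part (2) is handled by the two-edge move $(uu_1,uu_2)\to(5,2)$ in the same spirit, again aiming at the (1.3.1) configuration: here the four negated hypotheses $5\notin C(u_2)$, no $(4,5)_{(u_2,u_4)}$-path, $2\notin C(u_1)$, and no $(3,2)_{(u_2,u_3)}$-path are precisely what guarantee properness and block the bichromatic cycles through the recolored edges, using once more that $u_3$ carries color $5$ only toward $v$ and that $v$ carries neither color $3$ nor color $4$.

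For part (3) I pick $j\in T_4$ admitting no $(3,j)_{(u_3,u_4)}$-path and recolor $(vu_4,uu_4)\to(4,j)$. Because $j\in T_4=T_{uv}\setminus C(u_4)$ the assignment $uu_4\to j$ is proper, and because $4\notin C(v)$ the assignment $vu_4\to 4$ is proper; afterward $C(u)=\{1,2,3,j\}$ and $C(v)=\{1,4,5,6\}$, so $C(u)\cap C(v)=\{1\}$ and we are done. The acyclicity check splits along the two recolored edges: a new $(4,y)$-cycle through $vu_4$ with $y\in\{1,5,6\}$ is exactly a $(4,y)_{(u_4,v_y)}$-path, excluded by the hypothesis ruling out such paths; and a new $(j,x)$-cycle through $uu_4$ with $x\in\{1,2,3\}$ requires a $(j,x)_{(u_4,u_x)}$-path, where $x=3$ is barred by the choice of $j$ and $x\in\{1,2\}$ is controlled through $j\in B_1\cup B_2$, $T_4\subseteq C(u_1)$, and Lemma~\ref{lemma06}. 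I expect this last verification to be the main obstacle: unlike parts (1) and (2), part (3) alters two edges simultaneously—one at $v$ and one at $u$—and paints $uu_4$ with a fresh color $j\in T_{uv}$, so one must show that no alternating path closes through either recolored edge. The genuinely delicate point is ruling out the $(j,1)$- and $(j,2)$-cycles at $u_4$, for which the plan is to invoke the maximality in Lemma~\ref{lemma06} against the $(1,j)_{(u_1,v_1)}$- or $(2,j)_{(u_2,v_2)}$-path supplied by $j\in B_1\cup B_2$, combined with the containment $T_4\subseteq C(u_1)$, to force $u_4$ off the relevant maximal bichromatic path.
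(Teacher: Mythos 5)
Your overall strategy coincides with the paper's: each clause is the contrapositive of one of the three recoloring moves displayed immediately before the statement, and your part (1) is complete and correct, including the key observation that the $(5,3)$-pair cannot close a cycle because the unique color-$5$ edge at $u_3$ is $vu_3$ and $3\notin C(v)$.

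The problems are in parts (2) and (3). In (2) you assert that the four negated hypotheses ``are precisely what guarantee properness and block the bichromatic cycles,'' but you never perform the check, and it does not go through as written. The literal move $(uu_1,uu_2)\to(5,2)$ makes $uu_2\to 2$ a no-op and would require $5\notin C(u_1)$ for properness, which is not hypothesized; the only reading consistent with the hypotheses $2\notin C(u_1)$ and $5\notin C(u_2)$ is $uu_1\to 2$, $uu_2\to 5$. One must then examine the color pairs $(2,5),(2,3),(2,4),(5,3),(5,4)$ at $u$. Your two stated reasons dispose of $(5,3)$, $(5,4)$ and $(2,4)$, and $(2,5)$ is harmless because $u_2$ loses its unique color-$2$ edge; but the pair $(2,3)$ through the recolored $uu_1$ and the untouched $uu_3$ requires excluding a $(2,3)$-path joining $u_1$ and $u_3$, and the remaining hypothesis, ``no $(3,2)_{(u_2,u_3)}$-path,'' concerns $u_2$, not $u_1$. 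Nothing in your argument closes this case, and it is exactly the delicate one. In (3) you explicitly leave the $(j,1)$- and $(j,2)$-cycles as a plan, and the plan as described misses the two facts that make it work: (i) $j\in T_4$ forces $j\in B_1$ and not merely $j\in B_1\cup B_2$, because $B_2\subseteq C(v_2)=C(u_4)$ while $j\notin C(u_4)$; hence the maximal $(1,j)$-path from $u$ runs $u-u_1-\cdots-v_1-v$, and $u_4$, having no color-$j$ edge, can be neither an interior vertex nor an endpoint of it, so no $(1,j)$-path joins $u_1$ to $u_4$; (ii) the $(j,2)$-cycle needs no lemma at all, since recoloring $vu_4\to 4$ deletes the unique color-$2$ edge at $u_4$, so no $(2,j)$-path can leave $u_4$ afterwards. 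Your proposed appeal to Lemma~\ref{lemma06} against a $(2,j)_{(u_2,v_2)}$-path is aimed at the wrong object: here $v_2=u_4$ itself, and that path is destroyed by the very recoloring you perform.
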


We distinguish the following two scenarios on whether $T_4\subseteq C(u_2)$ or not.
\begin{itemize}
\parskip=0pt
\item
     $T_4\subseteq C(u_4)$.
     When $T_3\subseteq C(u_1)\cap C(u_2)$, since $3, 4\in C(u_1)$ or $1\in S_u$ by Corollary~\ref{auv2d(u)5}(5.1), $\sum_{x\in N(u)\setminus \{v\}}d(x)
		\ge\sum_{i\in [1, 4]}w_i + 3t_{u v}\ge 11 + |\{2, a_2, 1, a_1\}| + \{3, 5, 2, 4, b_2\} + \{3, 4\}\setminus \{a_1\}/1 +3t_{u v} = 3(\Delta - 1) + 10 = 3\Delta + 7$,
     it follows that $\Delta \ge 7$ and $\sum_{x\in N(u)\setminus \{v\}}d(x)\ge 2\Delta + 14$, a contradiction.
     Hence, $T_3\setminus (C(u_1)\cap C(u_2))\ne \emptyset$.
     One sees that if $T_3\setminus C(u_1)\ne \emptyset$,
     then $4\in C(u_1)$,
     and it follows from Corollary~\ref{auv2}(5.2) that $H$ contains a $(4, j)_{(u_1, u_4)}$-path for $j\in T_1$, mult$_{S_u}(3)\ge 2$, from Corollary~\ref{auv2}(5.1) that $1, 6\in S_u$;
     if $T_3\setminus C(u_2)\ne \emptyset$,
     then $4\in C(u_2)$,
     and it follows from Corollary~\ref{auv2}(5.2) that $H$ contains a $(4, j)_{(u_2, u_4)}$-path for $j\in T_2$, mult$_{S_u}(3)\ge 2$, from Corollary~\ref{auv2}(5.1) that $6\in S_u$.
     One sees that  $t_0\ge t_4\ge 2$.
     Since $5\in S_u\setminus C(u_3)$, $6\in S_u$ and mult$_{S_u}(3)\ge 2$,
     $|C(u)\cup \{c(v u_3), c(v u_4)\}| + $mult$_{S_u}(3) + $mult$_{S_u\setminus C(u_3)}(5) + |\{6\}|\ge 10$.
     If $T_3\cup T_4\subseteq C(u_i)$ for some $i\in [1, 2]$, then $t_{3 4}\ge w_i - 1$.
     \begin{itemize}
     \parskip=0pt
     \item
         $T_3\subseteq C(u_1)$ and $T_3\cap C(u_2)\ne \emptyset$.
         It follows that $4\in C(u_2)$, $6\in S_u$, $5\in S_u\setminus C(u_3)$.
         Then $\sum_{x\in N(u)\setminus \{v\}}d(x)\ge\sum_{i\in [1, 4]}w_i + 2t_{u v} + t_0\ge 10 + |\{4\}| + |\{1/4\}| + 2t_{u v} + t_0 \ge 2\Delta + 10 + t_4 + c_{3 4}\ge 2\Delta + 2 + 1 =2\Delta + 13$.
         Hence, $c_{3 4}= 1$.

         \quad Since $3/4, 5/2/4\in C(u_1)$, $C(u_1) = \{1, 4\}\cup T_3\cup T_4$.
         It follows from \ref{245524}(1) that $H$ contains a $(4, 5)_{(u_1, u_4)}$-path, $5\in C(u_4)$,
         from \ref{245524}(2) that $5\in C(u_2)$,
         and from Corollary~\ref{auv2d(u)5}(7.1) that $1\in S_u$.
         Then $\sum_{x\in N(u)\setminus \{v\}}d(x)\ge\sum_{i\in [1, 4]}w_i + 2t_{u v} + t_0\ge 10 + |\{4\}| + |\{5, 4, 1\}| + 2t_{u v} + t_0 \ge 2\Delta + 12 + t_4 + t_{3 4}\ge 2\Delta + 15$.
     \item
         $T_3\subseteq C(u_2)$ and $T_3\cap C(u_1)\ne \emptyset$.
         It follows that $4\in C(u_1)$, $1\in S_u$,
         and from Corollary~\ref{auv2d(u)5}(7.1) that if $2\not\in S_u$, then $4\in C(u_2)$.
         Then $\sum_{x\in N(u)\setminus \{v\}}d(x)\ge\sum_{i\in [1, 4]}w_i + 2t_{u v} + t_0\ge 10 + |\{4\}| + |\{1\}| + |\{2/4\}| + 2t_{u v} + t_0 \ge 2\Delta + 11 + t_4 + c_{3 4}\ge 2\Delta + 2 + 1 = 2\Delta + 14$.
     \item
         $T_3\setminus C(u_1)\ne \emptyset$ and $T_3\cap C(u_1)\ne \emptyset$.
         It follows that $4\in C(u_1)\cap C(u_2)$, $1, 6\in S_u$, $5\in S_u\setminus C(u_3)$.
         Then $\sum_{x\in N(u)\setminus \{v\}}d(x)\ge\sum_{i\in [1, 4]}w_i + 2t_{u v} + t_0\ge 10 + |\{4\}| + |\{4\}| + |\{1\}| + 2t_{u v} + t_0 \ge 2\Delta + 11 + t_0\ge 2\Delta + 13$.
         It follows that $2\not\in S_u\setminus C(u_4)$, mult$_{S_u}(6)= $ mult$_{S_u\setminus C(u_3)}(5) = 1$.
         One sees that for each $i\in [1, 2]$, $B_1\cap T_3\ne \emptyset$.
         It follows from Corollary~\ref{auv2}(4) that $6\in C(u_3)$ and then $b_2 = 5$.
         Hence, by \ref{245524} (1) and (2), we can get an acyclic edge $(\Delta + 5)$-coloring for $H$.

     \end{itemize}

\item
     $T_4\setminus C(u_4)\ne \emptyset$.
     Then $3\in C(u_2)$ and $H$ contains a $(3, j)_{(u_3, u_2)}$-path for each $j\in T_4\cap T_2$.
     It follows from Corollary~\ref{auv2}(5) that $H$ contains a $(3, j)_{(u_3, u_2)}$-path for each $j\in T_2$.
     Thus, one sees clearly that $T_2\subseteq C(u_3)$ and $T_3\subseteq C(u_2)$.
     It follows from Corollary~\ref{auv2}(7.2) that $2\in S_u\setminus C(u_4)$,
     from Corollary~\ref{auv2}(5.1) that $6\in S_u$,
     from Corollary~\ref{auv2}(5.2) or (3) that $4\in B_1\subseteq C(u_1)\cap C(v_1)$ and mult$_{S_u}(4)\ge 2$.
     One sees from Corollary~\ref{auv2d(u)5}(7.1) that if $3\not\in C(u_1)$, then $1\in S_u$.
     Then $\sum_{x\in N(u)\setminus \{v\}}d(x)\ge\sum_{i\in [1, 4]}w_i + 2t_{u v} + t_0\ge |\{1, 4\}| + |\{2, 3\}| + |\{3, 5\}| + |\{2, 4\}| + |\{5, 4, 2, 6, 1/3\}| + 2t_{u v} + t_0 \ge 2\Delta + 11 + t_0$.

     \quad When $T_3\subseteq C(u_1)$, one sees that $c_{3 4}\ge 1$, and $t_0\ge t_3 + t_{3 4}$,
     Thus, $\sum_{x\in N(u)\setminus \{v\}}d(x)\ge 2\Delta + 11 + t_0\ge 2\Delta + 11 + t_3 + t_{34}\ge 2\Delta + 11 + 1 + 1 = 2\Delta + 13$.
     It follows that $W_3 = \{3, 5\}$, and mult$_{S_u\setminus C(u_3)}(5) = 1$.
     However, we are done by Corollary~\ref{auv2}(4).

     \quad In the other case, $t_3\setminus C(u_1)\ne\emptyset$.
     Then $H$ contains a $(4, j)_{(u_1, u_4)}$-path,
     and it follows from Corollary~\ref{auv2}(5.2) mult$_{S_u}(3)\ge 2$,
     from Corollary~\ref{auv2}(5.1)$1\in S_u$.
     Then $\sum_{x\in N(u)\setminus \{v\}}d(x)\ge\sum_{i\in [1, 4]}w_i + 2t_{u v} + t_0\ge |\{1, 4\}| + |\{2, 3\}| + |\{3, 5\}| + |\{2, 4\}| + |\{5, 4, 2, 6, 3, 1\}| + 2t_{uv} + t_0 \ge 2\Delta + 12 + t_0\ge 2\Delta + 12 + c_{34}$.
     It follows that $c_{3 4}\le 1$.
     Together with $\{1, 4, 5/6\}\cup T_4\subseteq C(v_1)$, $R_1\ge 2$ and $R_1\cap T_3\ne \emptyset$.
     It follows that for some $\gamma\in R_1\cap T_3$,
     $H$ contains a $(2, \gamma)_{(u_2, u_4)}$-path and a $(6, \gamma)_{(v_1, v_6)}$-path.
     Hence, it follows from Corollary~\ref{auv2}(4) that $5\in B_1\cup B_2$,
     mult$_{S_u\setminus C(u_3)}(5)\ge 2$,
     and $\sum_{x\in N(u)\setminus \{v\}}d(x)\ge 2\Delta + 12 + c_{3 4}$.
     Then $c_{34} = 0$, $t_0= 0$, $T_4\cap C(u_2) = \emptyset$, and mult$_{S_u\setminus C(u_3)}(5)= 2$, mult$_{S_u}(6) = 1$.
     Since $4\in B_1$ and $H$ contains a $(3, j)_{(u_2, u_3)}$-path for each $j\in T_4$,
     it follows from \ref{245524}(3) that $H$ contains a $(4, i)_{(u_4, v_i)}$-path for some $i\in [5, 6]$.
     \begin{itemize}
     \parskip=0pt
     \item
         $5\in C(u_1)\cap C(u_2)$.
         It follows that $6\in C(u_4)$ and $H$ contains a $(j, 6)_{(u_4, v_6)}$-path for each $j\in \{4\}\cup T_4$.
         Then $(u u_1, u v)\overset{\divideontimes} \to (6, T_4)$\footnote{By \ref{T12notempty}(1) and \ref{245524}(3), we are done. }.
     \item
         $5\in C(u_2)\cap C(u_4)$.
         It follows that $H$ contains a $(4, 5)_{(u_4, u_1)}$-path,
         and from \ref{245524}(3) that $H$ contains a $(4, 6)_{(u_4, v_6)}$-path.
         Then $(u u_2, u v)\overset{\divideontimes}\to (6, \gamma)$.
     \item
         $5\in C(u_1)\cap C(u_4)$.
         Then $5\in B_1$.
         If $H$ contains no $(4, 5)_{(u_2, u_4)}$-path, then $(u u_2, u v)\overset{\divideontimes}\to (5, T_3)$.
         Otherwise, $H$ contains a $(4, 5)_{(u_2, u_4)}$-path and from \ref{245524}(3) that $H$ contains a $(4, 6)_{(u_4, v_6)}$-path.
         Then $(u u_2, u v)\overset{\divideontimes}\to (6, \gamma)$.
     \end{itemize}
\end{itemize}

(1.3.3.) $(v u_3, v u_4)_c = (5, 6)$.
If $5\not\in C(u_1)$ and $H$ contains no $(5, i)_{(u_1, u_i)}$-path for each $i\in \{2, 4\}$,
then $u u_1\to 5$ reduces the proof to (1.3.2).
If $6\not\in C(u_1)$ and $H$ contains no $(6, i)_{(u_1, u_i)}$-path for each $i\in \{2, 3\}$,
then $u u_1\to 6$ reduces the proof to (1.3.2).
Hereafter, we proceed with the following proposition, or otherwise we are done:
\begin{proposition}
\label{5656}
\begin{itemize}
\parskip=0pt
\item[{\rm (1)}]
	$5\in C(u_1)$ or $H$ contains a $(5, i)_{(u_1, u_i)}$-path for some $i\in \{2, 4\}$;
\item[{\rm (2)}]
	$6\in C(u_1)$ and $H$ contains no $(6, i)_{(u_1, u_i)}$-path for some $i\in \{2, 3\}$,
\item[{\rm (3)}]
    $5\in C(u_2)$ or $H$ contains a $(5, i)_{(u_1, u_i)}$-path for some $i\in \{1, 4\}$;
\item[{\rm (4)}]
	$6\in C(u_2)$ and $H$ contains no $(6, i)_{(u_1, u_i)}$-path for some $i\in \{1, 3\}$.
\end{itemize}
\end{proposition}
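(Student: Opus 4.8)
The plan is to prove all four conditions by the recolouring-and-reduce argument that opens case (1.3.3), exploiting the symmetry between $u_1$ and $u_2$. Recall the present setting: $d(u)=d(v)=5$, $A_{uv}=\{1,2\}$, $U_{uv}=\{3,4\}$, $V_{uv}=\{5,6\}$, and $(vu_3,vu_4)_c=(5,6)$, so that $u_3=v_5$ and $u_4=v_6$; also $C(u)=\{1,2,3,4\}$, $C(v)=\{1,2,5,6\}$, and $T_{uv}=[7,\Delta+5]$. I would establish (1) first and obtain (3) from it by the symmetry exchanging the shared colours $1\leftrightarrow 2$ and the vertices $u_1\leftrightarrow u_2$; in the same way (2) yields (4).

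For (1), suppose to the contrary that $5\notin C(u_1)$ and that $H$ has no $(5,i)_{(u_1,u_i)}$-path for either $i\in\{2,4\}$, and recolour $uu_1\to 5$. Since $5\notin C(u)$ and $5\notin C(u_1)$, this is proper; and because altering one edge can only create a bichromatic cycle through that edge, any new bichromatic cycle must be a $(5,x)$-cycle through $uu_1$ with $x\in C(u)\setminus\{1\}=\{2,3,4\}$, i.e.\ it would require a $(5,x)_{(u_1,u_x)}$-path in $H$. The cases $x\in\{2,4\}$ are excluded by assumption, so the only remaining worry is $x=3$.

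Ruling out the $(5,3)$-cycle is the one genuinely non-routine step. Here I would use that $u_3=v_5$: the unique edge of colour $5$ incident with $u_3$ is $vu_3$, so any $(5,3)_{(u_1,u_3)}$-path would have to reach $u_3$ along $vu_3$ and hence pass through $v$, where its next edge would carry colour $3$; but $3\notin C(v)=\{1,2,5,6\}$, so no such path exists. Thus recolouring $uu_1\to 5$ creates no bichromatic cycle. Afterwards $C(u)=\{2,3,4,5\}$, so $C(u)\cap C(v)=\{2,5\}$ with $vu_3$ carrying the shared colour $5$ and $vu_4$ the non-shared colour $6$, which is exactly the configuration already treated in (1.3.2). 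This contradicts our being in the ``otherwise'' branch and proves (1).

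Condition (2) is the identical argument with $5$ replaced by $6$: recolour $uu_1\to 6$, proper when $6\notin C(u_1)$, whose only possible new cycle is a $(6,x)$-cycle with $x\in\{2,3,4\}$; the case $x=4$ is killed just as above, now because $u_4=v_6$ carries its unique colour-$6$ edge to $v$ while $4\notin C(v)$, leaving only $x\in\{2,3\}$, and the recolouring again reduces to (1.3.2). The symmetric statements (3) and (4) follow verbatim after interchanging $(1,u_1)$ with $(2,u_2)$, which sends the index sets $\{2,4\}$ and $\{2,3\}$ to $\{1,4\}$ and $\{1,3\}$, the blocking colours missing from $C(v)$ remaining $3$ and $4$. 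The main obstacle throughout is precisely the path-blocking observation of the previous paragraph; once the restricted index sets are justified by this ``unique edge to $v$ plus missing colour'' fact, the rest is the standard reduce-to-an-earlier-case bookkeeping used repeatedly in this section.
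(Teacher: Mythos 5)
Your proof is correct and follows essentially the same route as the paper, which disposes of (1) and (2) by observing that under the stated hypotheses the recolouring $uu_1\to 5$ (resp.\ $uu_1\to 6$) reduces the situation to case (1.3.2), with (3) and (4) obtained by the $u_1\leftrightarrow u_2$ symmetry. Your explicit justification of why the index $3$ (resp.\ $4$) can be omitted from the set of colours to test --- namely that the unique colour-$5$ edge at $u_3=v_5$ (resp.\ colour-$6$ edge at $u_4=v_6$) leads into $v$, which has no edge coloured $3$ (resp.\ $4$) --- is a detail the paper leaves implicit, and it is exactly the right reason.
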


It follows that $5, 6\in C(u_1)\cup C(u_2)$, $5/2/4, 6/2/3\in C(u_1)$, $5/1/4, 6/2/3\in C(u_2)$,
and $T_1\ne \emptyset$, $T_2\ne\emptyset$.
If for some $i\in [1, 2]$, $C(v_i) =\{i\}\cup T_{uv}$,
then $v v_i\to 3$ reduces the proof to (1.3.2).
Otherwise, $R_i\ne \emptyset$.
Hence, for each $i\in [1, 2]$, there exists a $a_i\in [3, 4]\cap C(u_i)$, and $b_i\in [5, 6]\cap C(v_i)$.

One sees that $w_1 + w_2\ge 6$, $d(u_1) + d(u_2)\ge w_1 + w_2 + t_{u v}\ge |\{1, a_1\}| + |\{2, a_2\}| + |\{5, 6\}| + \Delta - 1 = \Delta + 5$, $t_1 = \Delta - 1 - (d(u_1) - w_1)$ and $t_2 = \Delta - 1 - (d(u_2) - w_2)$.
Then $t_1 + t_2 = 2\Delta - 2 - (d(u_1) + d(u_2)) + (w_1 + w_2)\ge 2\Delta - 2 - 2\Delta + w_1 + w_2 = w_1 + w_2 - 2\ge 4$.

When $\min \{d(u_3), d(u_4)\} = 8$, it follows that $d(u_1) = d(u_2) = 6$,
thus, $d(u_1) + d(u_2)\ge \Delta + 5\ge 8 + 5 = 13 > 12$.
When $\min \{d(u_3), d(u_4)\} = 7$, it follows that $d(u_1) + d(u_2)\le \Delta + 6$, $6\le w_1 + w_2 \le 7$,
and $t_1 + t_2 = 2\Delta - 2 - (d(u_1) + d(u_2)) + (w_1 + w_2)\ge 2\Delta - 2 - (\Delta + 6) + 6\ge \Delta - 2$.

One sees that if $c_1\ge 3$ and $c_2\ge 3$, then $d(u_1) + d(u_2)\ge c_1 + c_2 + |\{5, 6\}| + t_{uv} = \Delta + 7$.
Hence, assume w.l.o.g. $c_1 = 2$, i.e., $4\in C(u_1)$ and $2, 3\not\in C(u_1)$.
Then $H$ contains a $(4, j)_{(u_1, u_4)}$-path for each $j\in T_1$ and $T_1\subseteq C(u_4)$.
It follows from \ref{5656} (2) that $6\in C(u_1)$,
and from Corollary~\ref{auv2}(5.1) that $1\in S_u$.
By Corollary~\ref{auv2d(u)5}(7.1), $1/2/3\in C(u_4)$.
Since $\{4, 5, 1/2/3\}\cup T_1\subseteq C(u_4)$, $T_2\setminus C(u_4)\ne \emptyset$.
Then $3\in C(u_2)$, and it follows from Corollary~\ref{auv2}(5.1) that $2\in S_u$,
from Corollary~\ref{auv2}(5.2) that mult$_{S_u}(4)\ge 2$,
from Corollary~\ref{auv2d(u)5}(7.2) that $2/1/4\in C(u_3)$.
Since $\{3, 5, 2/1/4\}\cup T_2\subseteq C(u_3)$, $T_1\setminus C(u_3)\ne \emptyset$.
It follows from Corollary~\ref{auv2}(5.2) that mult$_{S_u}(3)\ge 2$ and $3\in C(u_4)$.

If $6\not\in C(u_2)\cup C(u_3)$,
it follows from \ref{5656}(4) that $W_2 = [1, 3]$, $H$ contains a $(1, 6)_{(u_1, u_2)}$-path,
then $(u u_3, u u_2)\to (6, T_2)$ reduces the proof to (1.2.1.).
Otherwise, $6\in C(u_2)\cup C(u_3)$.
Then $\sum_{x\in N(u)\setminus \{v\}}d(x)\ge |\{1, 4, 6\}| + |\{2, 3\}| + |\{3, 5\}| + |\{3, 4, 6\}| + |\{4, 1, 2, 5, 6\}| + 2t_{u v} + t_0\ge 2\Delta + 13 + t_0$.
It follows that mult$_{S_u\setminus C(u_3)}(2) = 1$.
One sees from Corollary~\ref{auv2d(u)5}(7.2) that if $2\in C(u_4)$, then $H$ contains a $(4, 2)_{(u_3, u_4)}$-path.
Then $(u u_1, u u_2)\to (2, 5)$ reduces the proof to (1.3.2.).

In the other case, ($A_{7.1}$), i.e., $\min\{d(u_3), d(u_4)\}\in [5, 6]$,
and $\sum_{x\in N(u)\setminus \{v\}}d(x)\le 2\Delta + 13$.
Assume w.l.o.g. $d(u_3)\le 6$.
If $T_1\cup T_2\subseteq C(u_3)$, then $C(u_3) = \{3, 5\}\cup T_1\cup T_2$, $t_1 = t_2 = 2$,
it follows from Corollary~\ref{auv2d(u)5}(7.2) that $4\in C(u_1)\cap C(u_2)$,
thus $W_1 = \{1, 4, 5\}$, $W_2 = \{2, 4, 6\}$, and by \ref{5656} (2), we are done.
Otherwise, $(T_1\cup T_2)\setminus C(u_3)\ne \emptyset$.
It follows from Corollary~\ref{auv2}(5.2) that mult$_{S_u}(3)\ge 2$.
Likewise, $(R_1\cup R_2)\setminus C(u_3)\ne \emptyset$ and mult$_{S_u}(5)\ge 2$.

Assume w.l.o.g. $T_1\setminus C(u_3)\ne \emptyset$.
Then $4\in C(u_1)$ and $T_1\subseteq C(u_3)$.
By Corollary~\ref{auv2d(u)5}(7.1), $4\in C(u_2)$ or $2\in S_u$.
One sees that $\{1, 4\}| + |\{2\}| + |\{3, 5\}| + |\{4, 6\}| + |\{3, 3, 5, 5, 1, 6\}| + 2t_{u v} + t_0 = 2\Delta + 11 + t_0$.
Then $\sum_{x\in N(u)\setminus \{v\}}d(x)\ge 2\Delta + 11 + t_0 + |\{2/4\}| = 2\Delta + 12 + t_0$.
One sees that $t_2\ge 2$.
It follows that $T_2\setminus T_0\ne \emptyset$.
It follows that $2\in S_u$,
and from Corollary~\ref{auv2}(5.2) that if $4\not\in C(u_2)$, then mult$_{S_u}(4)\ge 2$, i.e., $4\in C(u_2)\cup C(u_3)$.
Hence, $\sum_{x\in N(u)\setminus \{v\}}d(x)\ge  2\Delta + 11 + t_0 + |\{2, 4\}| = 2\Delta + 13 + t_0$.
It follows that for each $i\in \{1, 2, 6\}$, mult$_{S_u}(i) = 1$,
for each $j\in [3, 5]\cup T_{u v}$, mult$_{S_u}(j) = 2$.
Since $4\in C(u_1)$, $6\in C(u_1)\cup C(u_2)$, $C(u_3)\subseteq [1, 5]\cup T_2$.

One sees from Corollary~\ref{auv2d(u)5} (5) that $T_2\subseteq C(u_i)$ for some $i\in [3, 4]$.
If $T_2\subseteq C(u_4)$, then $4\in C(u_2)$, and $C(u_3)\subseteq \{1, 2, 3, 5\}$, i.e., $d(u_3)\le 4$, a contradiction.
Otherwise, $T_2\subseteq C(u_3)$.
Then $3\in C(u_2)$.
If $6\in C(u_1)$, one sees from \ref{5656} (4) that $H$ contains a $(1, 6)_{(u_1, u_2)}$-path,
then $(u u_3, u u_2)\to (6, T_2)$ reduces the proof to (2.1.2.1.).
Otherwise, $6\in C(u_2)$ and it follows from \ref{5656} (4) that $2\in C(u_1)$.
Then it follows from Corollary~\ref{auv2d(u)5} (7.2) that $1\in C(u_3)$.
Since $d(u_2) + d(u_3)\ge t_{u v} + \{2, 3, 6\} + \{3, 5, 1\} + \{4\} = \Delta + 6$,
$5\not\in C(u_1)$ and then it follows from \ref{5656}(1) that $4\in C(u_2)$.
Then $W_2 = \{2, 3, 4, 6\}$, $W_3 = \{3, 5, 1\}$.
It follows from Corollary~\ref{auv2} (5.1) that $H$ contains a $(1, 3)_{(u_3, u_4)}$-path.
Then $(u u_1, u u_2)\to (1, 6)$ reduces the proof to (1.3.2.).



{\bf Case 2.} $d(v) = 6$.
Then $t_{u v} = \Delta - 2$, $V_{u v} = [5, 7]$.

One sees that $T_i = \Delta - 2 - (d(u_i) - w_i) = \Delta - d(u_i) + w_i - 2\ge w_i - 2$, $i\in [1, 4]$;
$R_j = \Delta - d(v_i) + |C(v_i)\cap C_{u v}| - 2\ge |C(v_i)\cap C_{u v}| - 2$, $j\in C(v)$,
and if for some $i\in [3, 4]$, $i, c(v u_i)\in \{i_1, i_2\}$,
then $C(u_i) = \{i, c(v u_i)\}\cup T_{u v}$ and $d(u_i) = \Delta$.
By symmetry, we consider the following five cases.

\begin{figure}[h]
\begin{center}
\includegraphics[width=6.0in]{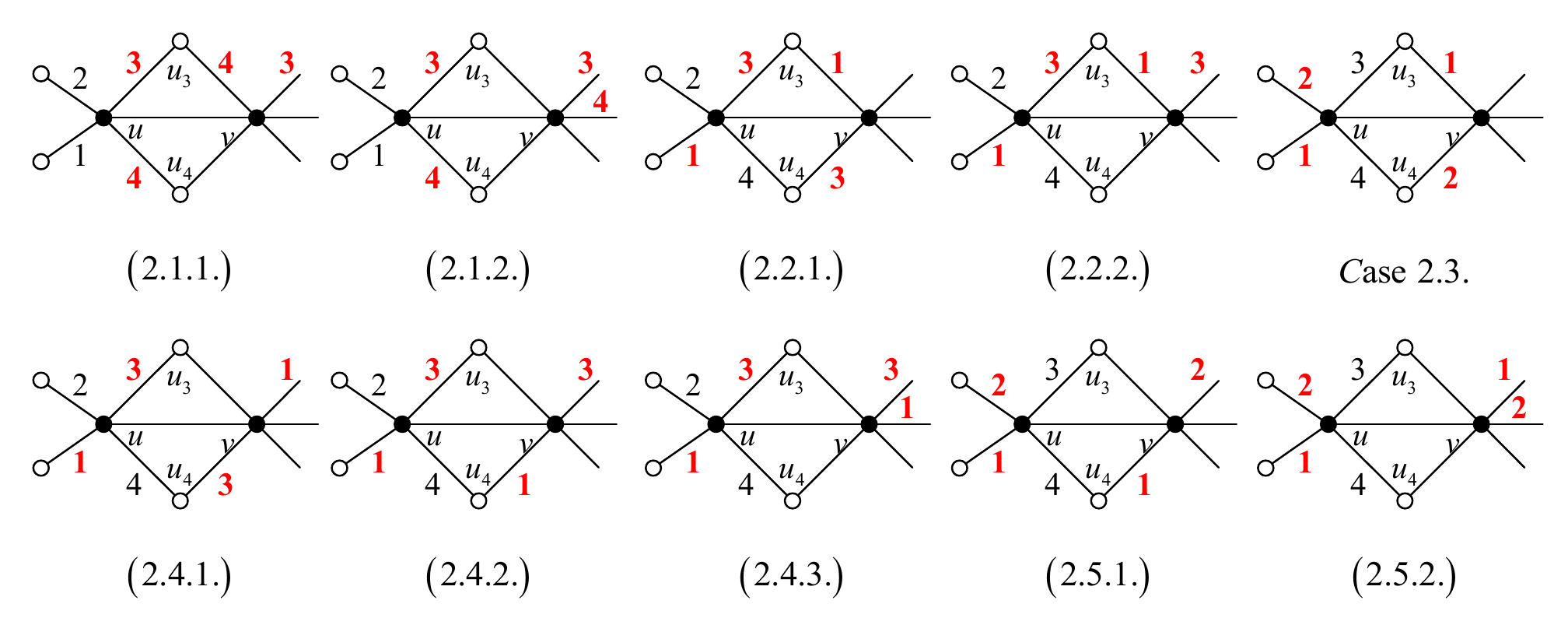}
\caption{Coloring of ($A_8$): $|C(u)\cap C(v)| = 2$ and $C(v) = \{i_1, i_2\}\cup [5, 7]$.\label{fig09}}
\end{center}
\end{figure}

{Case 2.1.} $\{i_1, i_2\} = \{3, 4\}$.
Then $T_{u v}\subseteq C(u_3)\cup C(u_4)$.
Let $C(u_3) = W_3\cup T_4\cup C_{3 4}$, $C(u_4) = W_4\cup T_3\cup C_{3 4}$.
One sees from $t_3 + t_4 + c_{3 4} = t_{u v}$, $w_3 + w_4 + t_{u v} + c_{3 4}\le \Delta + 7$ that $w_3 + w_4\le 9$, $c_{3 4}\le \Delta + 7 - (w_3 + w_4 + t_{u v}) = 9 - (w_3 + w_4)$
and $t_3 + t_4\ge 2(\Delta - 2) + w_3 + w_4 - (\Delta + 7) = \Delta + (w_3 + w_4) - 11$.
By symmetry, there are two subcases on whether $\{c(v u_3), c(v u_4)\}\cap A_{u v} = \emptyset$ or not.

(2.1.1.) $\{c(v u_3), c(v u_4)\}\cap A_{u v} \ne \emptyset$, say $c(v u_3) = 4$.
Then $C(u_3) = [3, 4]\cup [8, \Delta + 5]$.
If $H$ contains no $(i, j)_{(u_4, v_i)}$-path for each $i\in [6, 7]$, $j\in T_{u v}\cup (\{1, 2\}\setminus C(u_4))$,
then $(v u_4, u v)\overset{\divideontimes} \to (i, 5)$.
Otherwise, assume w.l.o.g. $8\in T_4$ and $H$ contains a $(6, 8)_{(u_4, v_6)}$-path, $6\in C(u_4)$.
By Corollary~\ref{auv2}(1.2), assume w.l.o.g. $H$ contains a $(1, 8)_{(u_1, u_4)}$-path and $1\in C(u_4)$.
If $8\not\in C(u_2)$, then $(u u_2, u v)\overset{\divideontimes}\to (8, 2)$.
Otherwise, $8\in C(u_2)$.
It follows that $T_4\subseteq C(u_1)\cap C(u_2)$.
One sees that $1, 2\not\in B_3\cup B_4$ and from Corollary~\ref{auv2d(u)5}(4.2) that $T_{u v}\cap T_4\subseteq C(u_1)\cup C(u_2)$.
Hence, $T_0 = T_{u v}$.
If $H$ contains no $(4, 6)_{(u_4, v)}$-path, then $(u u_3, u v)\overset{\divideontimes}\to (6, 8)$.
If $H$ contains no $(1, i)_{(u_1, v)}$-path for some $i\in [5, 6]$,
then $(v u_3, u u_3, u v)\overset{\divideontimes}\to (1, i, 8)$.
Otherwise, $H$ contains a $(4, 6)_{(u_4, v)}$-path and a $(1, i)_{(u_1, v)}$-path for each $i\in [5, 6]$ and $5, 6\in C(u_1)$.
If there exists a $i\in [5, 6]\setminus C(u_2)$, then $(u u_2, u v)\overset{\divideontimes}\to (i, 2)$.
If $7\not\in S_u$, then $(u u_2, u v)\overset{\divideontimes}\to (7, 2)$.
Otherwise, $5, 6\in C(u_2)$ and $7\in S_u$.
Then $\sum_{x\in N(u)\setminus \{v\}}d(x)\ge\sum_{i\in [1, 4]}w_i + 2t_{u v} + t_0\ge |\{1, 5, 6\}| + |\{2, 5, 6\}| + |\{3, 4\}| + |\{4, 5, 1, 6\}| + |\{7\}| + 2t_{u v} + t_{u v} \ge 13 + 3t_{u v} = 3(\Delta - 2) + 13 = 3\Delta + 7$.
Ones sees clearly that $\sum_{x\in N(u)\setminus \{v\}}d(x)\ge 3\Delta + 7\ge 2\Delta + 14$ and none of ($A_{8.1}$)-- ($A_{8.4}$) holds.

(2.1.2.) $\{c(v u_3), c(v u_4)\}\cap A_{u v} = \emptyset$, i.e., $(v u_3, v u_4)_c = (5, 6)$.
If $6\not\in C(u_3)$ and $H$ contains no $(6, i)_{(u_3, u_i)}$-path for each $i\in C(u)$,
then $u u_3\to 6$ reduces the proof to (2.1.1.).
If $5\not\in C(u_4)$ and $H$ contains no $(5, i)_{(u_4, u_i)}$-path for each $i\in C(u)$,
then $u u_4\to 5$ reduces the proof to (2.1.1.).
Otherwise, $6/1/2/4\in C(u_3)$, $5/1/2/3\in C(u_4)$, and we proceed with the following proposition:
\begin{proposition}
\label{6u35u4}
\begin{itemize}
\parskip=0pt
\item[{\rm (1)}]
	$6\in C(u_3)$ or $H$ contains a $(6, i)_{(u_3, u_i)}$-path for some $i\in \{1, 2, 4\}$.
\item[{\rm (2)}]
	$5\in C(u_4)$ or $H$ contains a $(5, i)_{(u_4, u_i)}$-path for some $i\in \{1, 2, 3\}$.
\end{itemize}
\end{proposition}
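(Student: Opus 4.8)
The plan is to prove both parts by a single recoloring argument, taking advantage of the two simpler reductions that have just been ruled out immediately above the proposition. For part (1) I would first dispose of the trivial case: if $6\in C(u_3)$ then the assertion holds outright, so the only situation to examine is $6\notin C(u_3)$. In that case the edge $uu_3$ may legitimately be recolored from $3$ to $6$, since $6\notin C(u_3)$ guarantees no conflict at $u_3$ and $6\notin C(u)=[1,4]$ guarantees none at $u$. The only way this single recoloring can fail to yield an acyclic edge $(\Delta+5)$-coloring of $H$ is by creating a bichromatic cycle through the recolored edge $uu_3$. Such a cycle must be a $(6,i)$-cycle for a color $i$ present at both endpoints $u$ and $u_3$; since the remaining edges at $u$ carry colors $1,2,4$, we have $i\in\{1,2,4\}$, and by the standard observation on acyclic colorings (cf.\ Lemma~\ref{lemma06}) the existence of such a cycle is equivalent to $H$ containing a $(6,i)_{(u_3,u_i)}$-path.

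Next I would check the consequence of no such path existing. Then recoloring $uu_3\to 6$ produces a genuine acyclic edge $(\Delta+5)$-coloring of $H$; in it $c(uu_3)=6$ gives $C(u)=\{1,2,4,6\}$, while $C(v)=\{3,4,5,6,7\}$ is untouched (no edge at $v$ is recolored), so the new shared set is $A_{uv}=\{4,6\}$ with $c(vu_4)=6\in A_{uv}$. Hence $\{c(vu_3),c(vu_4)\}\cap A_{uv}\ne\emptyset$, and we have landed back in case (2.1.1), which is already handled. Therefore, to remain in the present (``otherwise'') branch we must have $6\in C(u_3)$ or a $(6,i)_{(u_3,u_i)}$-path for some $i\in\{1,2,4\}$, which is exactly (1). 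Part (2) is entirely symmetric: I would recolor $uu_4\to 5$ (legitimate when $5\notin C(u_4)$), observe that the only obstruction is a $(5,i)_{(u_4,u_i)}$-path for some $i\in\{1,2,3\}$, and confirm that in its absence the recoloring yields an acyclic coloring with $C(u)=\{1,2,3,5\}$, hence $A_{uv}=\{3,5\}$ and $c(vu_3)=5\in A_{uv}$, again reducing to case (2.1.1).

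I expect the argument to be short, with the difficulty being entirely bookkeeping rather than conceptual. The two points that need care are: first, the precise enumeration of which second colors can close a bichromatic cycle after the recoloring, so that the list $\{1,2,4\}$ (resp.\ $\{1,2,3\}$) is complete and the absence of the corresponding alternating paths really does certify acyclicity; and second, the correct recomputation of $A_{uv}$ together with the roles of $c(vu_3)$ and $c(vu_4)$, so that the reduction demonstrably returns to case (2.1.1) rather than spawning a new, unanalyzed configuration. Since each recoloring alters only one edge incident to $u$ and leaves $v$ and all of its incident edges fixed, no bichromatic cycle avoiding the recolored edge can be newly created, so I anticipate no genuine obstacle beyond this verification.
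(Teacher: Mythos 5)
Your argument is correct and is essentially the paper's own: the paper disposes of these two cases in the two sentences immediately preceding the proposition by recoloring $uu_3\to 6$ (resp. $uu_4\to 5$) and observing that the only obstruction is a $(6,i)_{(u_3,u_i)}$-path with $i\in\{1,2,4\}$ (resp. a $(5,i)_{(u_4,u_i)}$-path with $i\in\{1,2,3\}$), after which the new common set $A_{uv}$ contains $c(vu_4)=6$ (resp. $c(vu_3)=5$), reducing to case (2.1.1). Your bookkeeping of the recomputed $A_{uv}$ and of the admissible second colors for a bichromatic cycle matches the paper's reduction exactly.
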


Then $T_3\ne\emptyset$ and likewise, $T_4\ne \emptyset$.
It follows from Corollary~\ref{auv2}(1.2) that for each $i\in [3, 4]$, there exists a $a_i\in \{1, 2\}\cap C(u_i)$.
One sees from Corollary~\ref{auv2}(4) that $H$ contains a $(j, 6/7)_{(u_3, v)}$-path for some $j\in T_3$ or $5\in B_4$,
and $H$ contains a $(j, 5/7)_{(u_4, v)}$-path for some $j\in T_4$ or $6\in B_3$.

If $5\not\in S_u\setminus C(u_3)$,
since $H$ contains a $(3, 5)_{(u_3, u_4)}$-path by \ref{6u35u4},
then $(u u_1, u u_4, u v)\overset{\divideontimes}\to (5, T_4\cap T_2, T_3)$.
Otherwise, $5\in S_u\setminus C(u_3)$.
One sees that if $T_4\setminus C(u_2) \ne \emptyset$,
then it follows from Corollary~\ref{auv2}(5.2) that mult$_{S_u}(2)\ge 2$, $a_4 = 1$,
$H$ contains a $(1, T_4)_{(u_1, u_3)}$-path,
and from Corollary~\ref{auv2}(5.1) that $4\in C(u_1)$ or $H$ contains a $(4, 2)_{(u_3, u_2)}$-path, i.e., $2/4\in C(u_1)$.

When $6\not\in C(u_3)$ and $5\not\in C(u_4)$, it follows that $7\in C(u_3)\cap C(u_4)$,
and $H$ contains a $(7, T_3)_{(u_3, v)}$-path and a $(7, T_4)_{(u_4, v)}$-path.
Since $w_3 + w_4\le 9$, assume w.l.o.g. $w_3 = 4$, i.e., $W_3 = \{2, 3, 5, 7\}$.
It follows that $T_3\subseteq C(u_2)$ and from \ref{6u35u4}(1) that $H$ contains a $(2, 6)_{(u_3, u_2)}$-path,
$6\in C(u_2)$, from Corollary~\ref{auv2}(5.1) that $3\in C(u_2)$ or $H$ contains a $(3, 1/4)_{(u_3, u)}$-path, i.e., $3/1/4\in C(u_2)$.
One sees that $c_{3 4}\le 9 - (w_3 + w_4)\le 9 - 8 = 1$, $t_3 + t_4\ge \Delta + (w_3 + w_4) - 11 \ge \Delta + 8 - 11 = \Delta - 3$, and $t_3 + t_4 = \Delta - 3$ if and only if $c_{3 4} = 1$, $w_3 = w_4 = 4$.
\begin{itemize}
\parskip=0pt
\item
    $T_4\subseteq C(u_2)$.
    Since $1, 6, 3/1/4\in C(u_2)$,
    we have $t_3 + t_4 = \Delta - 3$, $c_{3 4} = 1$, and then $W_4 = \{4, 6, a_4, 7\}$.
    One sees that $5\not\in C(u_2)$ and $3\not\in C(u_4)$.
    It follows from \ref{6u35u4}(2) that $a_4 = 1$ and $H$ contains a $(1, 5)_{(u_4, u_1)}$-path.
    Then $(u u_2, u u_3, u v)\overset{\divideontimes}\to (5, T_3, 2)$.
\item
    $T_4\setminus C(u_2) \ne \emptyset$.
    If $6\not\in C(u_1)$, then $(u u_1, u u_4, u v)\to (6, T_4\cap T_2, 1)$.
    Otherwise, $6\in C(u_2)$.
    One sees that $|\{1, 6\}| + |\{2, 6\}| + |\{3, 5, 2, 7\}| + |\{4, 6, 1, 7\}| + |\{3, 2, 4, 5\}| + 2t_{uv} + t_0\ge 16 + 2(\Delta - 2) + t_0 = 2\Delta + 12 + t_0$.
    It follows that $T_3\setminus T_0\ne\emptyset$ and from Corollary~\ref{auv2}(5.2) and (3) that $1\in B_4$, mult$_{S_u}(1)\ge 2$ and then $1\in C(u_2)$.
    Then $\sum_{x\in N(u)\setminus \{v\}}d(x)\ge 2\Delta + 12 + t_0 + |\{1\}| = 2\Delta + 13 + t_0$.
    It follows that $t_0 = 0$, mult$_{S_u}(i) = 1$, $i\in [3, 4]$ and $7\not\in C(u_1)\cup C(u_2)$.
    If $4\not\in C(u_3)$, then $(u u_1, u u_3, u v)\to (7, T_3, T_4)$.
    If $3\not\in C(u_4)$, then $(u u_1, u u_4, u v)\to (7, T_4, T_3)$.
    Otherwise, $4\in C(u_3)$, $3\in C(u_4)$ and then $H$ contains a $(3, 1)_{(u_1, u_2)}$-path and a $(2, 4)_{(u_1, u_2)}$-path.
    One sees that $(u u_3, u u_4)\to (4, 3)$ reduces to an acyclic edge $(\Delta + 5)$-coloring for $H$.
    Since from $1\in B_4$ that $1\in C(v_1)$, it follows from Corollary~\ref{auv2d(u)5}(6) that $R_4\ne \emptyset$. Thus, by Corollary~\ref{auv2d(u)5}(2.1), we are done.
\end{itemize}

In the other case, assume w.l.o.g. $6\in C(u_3)$.
When $w_4 = 3$, say $W_4 = \{4, 6, 1\}$,
then $T_4\subseteq C(u_1)$.
It follows from \ref{6u35u4}(2) that $5\in C(u_1)$, $H$ contains a $(1, 5)_{(u_1, u_4)}$-path,
from Corollary~\ref{auv2}(4) that $6\in B_3$, $6\in C(u_1)$, and $H$ contains a $(1, 5)_{(u_1, u_4)}$-path,
from Corollary~\ref{auv2}(5.1) that for each $i\in \{4, 7\}$, $i\in C(u_1)$ or $H$ contains a $(i, 2/3)_{(u_1, u)}$-path and $i/2/3\in C(u_1)$.
If $5\not\in C(u_2)$, then $(u u_2, u u_3, u v)\overset{\divideontimes}\to (5, T_3, 2)$.
Otherwise, $5\in C(u_2)$.
One sees that $|\{1, 5, 6\}| + |\{2, 5\}| + |\{3, 5, a_3, 6\}| + |\{4, 6, 1\}| + |\{4, 7\}| + 2t_{uv} + t_0\ge 14 + 2(\Delta - 2) + t_0 = 2\Delta + 10 + t_0$.
Since from Corollary~\ref{auv2d(u)5}(7.1) that $1, 2\in C(u_3)$ or $3\in S_u$,
$\sum_{x\in N(u)\setminus \{v\}}d(x)\ge 2\Delta + 10 + t_0 + |\{\{1, 2\}\setminus \{a_3\}/3\}| = 2\Delta + 11 + t_0$.

Since $t_{u v}\ge \Delta + w_3 + w_4 -11\ge \Delta + 7 -11 = \Delta - 4$,
one sees clearly that $t_{u v} = \Delta - 4$ if and only if $c_{3 4} = 2$, $w_3 = 4$, $w_4 = 3$, and $d(u_3) = d(u_4) = \Delta = 7$.
\begin{itemize}
\parskip=0pt
\item
    $T_3\subseteq C(u_1)$.
    Since $1, 5, 6, 4/2/3, 7/2/3\in C(u_2)$,
    we have $t_3 + t_4 = \Delta - 4 = 7 - 4 = 3$, $c_{3 4} = 2$, and then $W_3 = \{3, 5, a_3, 6\}$,
    $W_4 = \{4, 6, 1\}$.
    It follows that $(T_3\cup T_4)\setminus C(u_2)\ne \emptyset$ and it follows from Corollary~\ref{auv2}(5.2) and (3) that $2\in B_3$, $T_3\subseteq C(u_3)$, $a_3 = 2$ and $2\in C(u_1)$.
    Thus, $t_0\ge t_3$, and $\sum_{x\in N(u)\setminus \{v\}}d(x)\ge 2\Delta + 10 + t_0 + |\{3, 2\}| = 2\Delta + 12 + t_0\ge 2\Delta + 12 + 2 = 2\Delta + 14$.
\item
    $T_3\setminus C(u_1) \ne \emptyset$.
    It follows from Corollary~\ref{auv2}(5.2) that $1\in C(u_2)\cup C(u_3)$, $2\in C(u_3)$,
    from Corollary~\ref{auv2}(5.1) that $3\in S_u$.
    Since from Corollary~\ref{auv2}(5.2) that $T_4\subseteq T_0$ or $2\in C(u_1)$,
    $\sum_{x\in N(u)\setminus \{v\}}d(x)\ge 2\Delta + 10 + |\{1, 3, T_4/2\}| = 2\Delta + 13$.
    It follows that $6\not\in C(u_2)$ and mult$_{S_u}(7) = 1$.
    If $H$ contains no $(6, j)_{(u_2, u_3)}$-path for some $j\in T_3$,
    then $(u u_2, u u_3, v u_4, u v)\overset{\divideontimes}\to (6, j, T_4, 2)$.
    Otherwise, $H$ contains a $(6, T_3)_{(u_2, u_3)}$-path and then a $(7, T_3)_{(u_4, v_7)}$-path.
    It follows that $7\in C(u_3)\setminus (C(u_1)\cup C(u_2))$.
    Then $(u u_2, u u_3, u v)\overset{\divideontimes}\to (7, T_3, 2)$.
\end{itemize}

Hence, $w_4\ge 4$, $8\le w_3 + w_4\le 9$, and for each $i\in [3, 4]$, $w_i\le 5$.
It follows that $c_{3 4}\le 9 - (w_3 + w_4)\le 9 - 8 = 1$, $t_3 + t_4\ge \Delta + (w_3 + w_4) - 11 \ge \Delta + 8 - 11 = \Delta - 3$.
One sees that $t_3 + t_4 = \Delta - 3$ if and only if $c_{3 4} = 1$, $w_3 = w_4 = 4$ and $d(u_3) = d(u_4) = \Delta = 7$.

\begin{lemma}
\label{6T33425}
$G$ admits an acyclic edge $(\Delta + 2)$-coloring, or the following holds:
\begin{itemize}
\parskip=0pt
\item[{\rm (1)}]
	If $7\not\in C(u_3)$ and $T_4\cap R_4\ne \emptyset$,
    then {\rm (i)} $H$ contains a $(7, j)_{(v_4, v_7)}$-path for some $j\in T_4\cap R_4$,
    or {\rm (ii)} $H$ contains a $(3, 4)_{(u_4, v_3)}$-path, a $(i, 5)_{(u_3, u_i)}$-path for some $i\in \{1, 2, 4\}$,
    and $3\in C(u_4)$, $i\in C(u_3)$, $5\in C(u_i)$.
\item[{\rm (2)}]
	If $7\not\in C(u_4)$ and $T_3\cap R_3\ne \emptyset$,
    then {\rm (i)} $H$ contains a $(7, j)_{(v_3, v_7)}$-path for some $j\in T_3\cap R_3$,
    or {\rm (ii)} $H$ contains a $(3, 4)_{(u_3, v_4)}$-path, a $(i, 6)_{(u_4, u_i)}$-path for some $i\in \{1, 2, 3\}$,
    and $4\in C(u_3)$, $i\in C(u_4)$, $6\in C(u_i)$.
\item[{\rm (3)}]
	If $W_3 = \{3, 5, 6, 2\}$, then $H$ contains a $(6, T_3)_{(u_3, u_4)}$-path.
\item[{\rm (4)}]
	If $5\in C(u_4)$ and $w_4 = 4$, then $H$ contains a $(5, T_4)_{(u_3, u_4)}$-path.
\item[{\rm (5)}]
	If $W_3 = \{3, 5, 6, a_3, a\}$, then for each $j\in T_3$, $H$ contains a $(i, j)_{(u_3, v_i)}$-path for some $i\in [6, 7]$.
\item[{\rm (6)}]
	If $W_4 = \{3, 5, 6, a_4, b\}$,, then for each $j\in T_4$, $H$ contains a $(i, j)_{(u_4, v_i)}$-path for some $i\in [5, 7]$.
\item[{\rm (7)}]
	If $W_3 = \{3, 5, a_3, 6, 7\}$, then $H$ contains no $(7, j)_{(u_3, v_7)}$-path for each $j\in T_3$ and $H$ contains a $(6, T_3)_{(u_3, u_4)}$-path.
\item[{\rm (8)}]
	If $W_4 = \{4, 6, a_4, 5, 7\}$, then $H$ contains no $(7, j)_{(u_4, v_7)}$-path for each $j\in T_4$ and $H$ contains a $(5, T_4)_{(u_3, u_4)}$-path.
\end{itemize}
\end{lemma}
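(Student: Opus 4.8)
The plan is to establish the dichotomy by proving each of the eight assertions (1)--(8) in the contrapositive form used throughout Subsection~3.2.2: whenever the asserted Kempe path is claimed to exist it is shown to exist because its absence would let a bounded recoloring of the edges at $u,v,u_3,u_4$ either drop $a_{uv}$ to~$1$ or fall back to subcase~(2.1.1), after which the $\overset{\divideontimes}\to$ operation (or Lemma~\ref{auv1coloring}) completes an acyclic edge $(\Delta+2)$-coloring of $G$; and whenever a path is claimed \emph{not} to exist, its existence would supply exactly such a recoloring, a contradiction. I work inside the standing hypotheses of subcase~(2.1.2): $\{i_1,i_2\}=\{3,4\}$, $(vu_3,vu_4)_c=(5,6)$, $6\in C(u_3)$, $5\in S_u\setminus C(u_3)$, $T_3\ne\emptyset$, $T_4\ne\emptyset$, $a_i\in\{1,2\}\cap C(u_i)$, together with the counting bounds $8\le w_3+w_4\le 9$, $w_4\ge 4$, $w_3,w_4\le 5$, $c_{34}\le 1$ and $t_3+t_4\ge\Delta-3$ (equality forcing $\Delta=7$, $w_3=w_4=4$, $c_{34}=1$) derived just before the lemma. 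These bounds are what make the palettes $W_3,W_4$ rigid enough to read off the obstructions explicitly.

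For the existence assertions (3), (4), (5), (6) and the $(6,T_3)_{(u_3,u_4)}$-, $(5,T_4)_{(u_3,u_4)}$-path parts of (7), (8), I would argue by Kempe chains. Fix an offending $j\in T_3$ (resp.\ $j\in T_4$); since $T_{uv}\subseteq C(u_3)\cup C(u_4)$ such a $j$ lies in $C(u_4)$ (resp.\ $C(u_3)$), while $6\in C(u_3)\cap C(u_4)$ (resp.\ $5$ lies on both $u_3,u_4$ under the hypotheses). If the asserted $(6,j)_{(u_3,u_4)}$-path (resp.\ $(5,j)_{(u_3,u_4)}$-path) were absent, then by Lemma~\ref{lemma06} the maximal $(6,j)$-chain from $u_3$ avoids $u_4$, so the swap $6\leftrightarrow j$ on that component is acyclic and frees color~$6$ at $u_3$; a follow-up recoloring of $vu_3$ (or of $uv$) then reduces $a_{uv}$ to~$1$ or returns to subcase~(2.1.1), completing $G$. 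Hence the chain is intact and the path exists. Items (3)--(6) are the $w_3,w_4\in\{4,5\}$ instances of this scheme, and the membership facts they require (such as $a_i\in\{1,2\}\cap C(u_i)$, mult$_{S_u}(6)\ge2$, $5,6\in B_3\cup B_4$) are supplied by Corollary~\ref{auv2}(4),(5) and Corollary~\ref{auv2d(u)5}(4),(5) applied with $\beta\in\{5,6\}$; for (5) and (6) the extra fifth palette slot of $W_3$ (resp.\ $W_4$) is precisely what forces the asserted $(i,j)_{(u_3,v_i)}$-path with $i\in[6,7]$ (resp.\ $i\in[5,7]$).

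The non-existence halves of (7), (8) and the dichotomies (1), (2) go the other way, and I would treat (1), (2) last as the heart of the lemma. For (7) I assume a $(7,j)_{(u_3,v_7)}$-path with $7\in W_3$ and use it, together with the edge $vv_7$, to recolor $vu_3$ and complete $G$; hence no such path survives (and (8) is symmetric). For (1), with $j\in T_4\cap R_4$ I recolor $vv_4\to j$, which is legal because $j\notin C(v)\cup C(v_4)$ and which deletes color~$4$ from $C(v)$, so $A_{uv}$ collapses to $\{3\}$ and $G$ is colored unless the recoloring creates a bichromatic $(j,\gamma)_{(v,v_4)}$-cycle for some $\gamma\in\{3,5,6,7\}$. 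The case $\gamma=7$ is exactly alternative~(i), the $(7,j)_{(v_4,v_7)}$-path. For $\gamma\in\{3,5,6\}$ I perform a secondary recoloring (a swap on $\{\gamma,j\}$, or a recolor of $uu_3$) that again completes $G$ unless the configuration is rigid; in that residue the hypothesis $7\notin C(u_3)$ and the degree bounds force the structured alternative~(ii), namely the $(3,4)_{(u_4,v_3)}$-path and a $(i,5)_{(u_3,u_i)}$-path with $3\in C(u_4)$, $i\in C(u_3)$, $5\in C(u_i)$, which I read off from Lemma~\ref{lemma06} and the membership statements of Corollary~\ref{auv2d(u)5}(4),(6); (2) follows by interchanging $(u_3,5)\leftrightarrow(u_4,6)$.

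The main obstacle is not any single Kempe swap but the interlocking bookkeeping: for each candidate $j$ I must track whether $j$ sits in $C(u_3)$, $C(u_4)$, $C(v_3)$, $C(v_4)$, or $C(v_7)$, and verify that every chosen recoloring reintroduces no bichromatic cycle through the nearly saturated palettes forced by $d(u_3)+d(u_4)\le\Delta+7$ and $\Delta=7$ in the extremal case, where the $v$-side colors $5,6,7$ and the $u$-side sets $T_3,T_4,C_{34}$ are almost exhausted. The delicate point is isolating, in (1) and (2), the single value of $\gamma$ that genuinely obstructs the recoloring of $vv_4$ (resp.\ $vv_3$) and showing that the remaining values of $\gamma$ either finish $G$ or collapse to the rigid alternative~(ii); this is where the certification by Lemma~\ref{lemma06} against the existing path structure recorded in Corollaries~\ref{auv2} and~\ref{auv2d(u)5} is most intricate.
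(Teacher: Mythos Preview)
Your overall strategy—prove each path-existence claim by showing that its failure would enable a recoloring that finishes $G$—matches the paper's, but the execution for (3)--(6) and the existence halves of (7), (8) rests on a step that does not go through. You write that if the $(6,j)_{(u_3,u_4)}$-path is absent then ``the swap $6\leftrightarrow j$ on that component is acyclic and frees color~$6$ at $u_3$''. Kempe swaps on bichromatic components preserve \emph{proper} edge coloring but not \emph{acyclic} edge coloring: swapping $6$ and $j$ along a $(6,j)$-component can create a new bichromatic cycle in some other pair $(6,k)$ or $(j,k)$. (A $4$-cycle colored $1,3,2,3$ is acyclic, yet swapping $1\leftrightarrow 2$ on the singleton component $\{ab\}$ produces $2,3,2,3$, a bichromatic cycle.) The paper never swaps along Kempe components; every recoloring it performs changes one or a few edges incident to $u$ or $v$, and for each such change it identifies precisely which bichromatic cycle could arise and either rules it out or uses its existence as structural data. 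Concretely, the paper's proof of (3) alone is a two-case argument: assuming some $8\in T_3$ has no $(6,8)_{(u_3,u_4)}$-path (and, since $7\notin W_3$, no $(7,8)_{(u_3,v_7)}$-path either), Corollary~\ref{auv2}(4) forces $5\in B_4$ and a $(5,i)_{(u_3,u_i)}$-path with $i\in\{1,2\}$; the argument then splits on whether $T_4\subseteq C(u_2)$, and in each branch performs several explicit recolorings at $u,v$, squeezing the bounds $w_3+w_4\le 9$, $c_{34}\le 1$ and the extremal $\Delta=7$ constraint at every step. Parts (5) and (7) require comparable case work; (7) in particular splits further on $W_4\cap\{1,2,3,5,7\}$. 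None of this is replaced by a single Kempe move.

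For (1) and (2) your starting move $vv_4\to j$ is sound, but the obstructing $\gamma$ is not $\{3,5,6\}$: the case $\gamma=6$ is vacuous since $j\in T_4$ gives $j\notin C(u_4)=C(v_6)$, and $\gamma=3$ is ruled out because $j\in T_4\subseteq B_3$ forces a $(3,j)_{(u_3,v_3)}$-path, so Lemma~\ref{lemma06} excludes any $(3,j)_{(v_4,v_3)}$-path. Only $\gamma=5$ remains—this is exactly the $(5,j)_{(v_4,u_3)}$-path the paper obtains from Lemma~\ref{auvge2}(1.2)—and the paper then performs two specific recolorings, $(vu_3,vv_4)\to(4,j)$ (which reduces to subcase~(2.1.1) unless a $(3,4)_{(u_4,v_3)}$-path exists) and $(uu_3,vu_3,vv_4,uv)\overset{\divideontimes}\to(5,4,j,T_3)$ (which finishes $G$ unless an $(i,5)_{(u_3,u_i)}$-path exists for some $i\in\{1,2,4\}$), together yielding exactly alternative~(ii). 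Your ``secondary recoloring (a swap on $\{\gamma,j\}$, or a recolor of $uu_3$)'' is too unspecified to stand in for these two moves and their distinct obstructions.
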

\begin{proof}
For each $j\in T_4\cap R_4$, assume $H$ contains no $(7, j)_{(v_4, v_7)}$-path.
It follows from Lemma~\ref{auvge2}(1.2) that $H$ contains a $(5, j)_{(v_4, u_3)}$-path.
If $H$ contains no $(3, 4)_{(u_4, v_3)}$-path, then $(v u_3, v v_4)\to (4, T_4\cap R_4)$ reduces the proof to (2.1.1.).
If $H$ contains no $(i, 5)_{(u_3, u_i)}$-path for each $i\in \{1, 2, 4\}$,
then $(u u_3, v u_3, v v_4, u v)\overset{\divideontimes}\to (5, 4, T_4\cap R_4, T_3)$.
Otherwise, (1) holds and likewise, (2) holds.

For (3) and (5), assume that $H$ contains no $(i, 8)_{(u_3, v_i)}$-path for each $i\in [6, 7]$, where $8\in T_3$.
Then it follows from Corollary~\ref{auv2}(4) that $5\in B_4\subseteq C(u_4)\cap C(v_4)$,
$H$ contains a $(i, 5)_{(u_i, u_3)}$-path for some $i\in [1, 2]$ and $5\in C(u_i)$.

For (3), $H$ contains a $(2, T_3)_{(u_2, u_3)}$-path, $T_3\subseteq C(u_2)$,
and $H$ contains a $(2, 5)_{(u_2, u_3)}$-path, $5\in C(u_2)$.
One sees from Corollary~\ref{auv2}(5.1) that $3, 7\in S_u$.

{\bf Case 1.} $T_4\subseteq C(u_2)$.
Since $2, 5, 3/1/4, 7/1/4\in C(u_2)$,
it follows that $t_3 + t_4 = \Delta - 3$, $c_{34} = 1$, and $w_3 = w_4 = 4$,
i.e., $W_4 = \{4, 6, a_4, 5\}$.
Then $C(u_2) = \{2, 5, 1\}\cup T_3\cup T_4$ and $3, 7\in C(u_1)$.
Since $|\{1, 3, 7\}| + |\{2, 5, 1\}| + |\{3, 5, 2, 6\}| + |\{4, 6, a_4, 5\}| + 2t_{u v} + t_0\ge 14 + 2(\Delta - 2) + t_0 = 2\Delta + 10 + t_0$, $(T_3\cup T_4)\setminus C(u_1)\ne \emptyset$.
It follows from Corollary~\ref{auv2}(5.2) and (3) that $1\in B_4\cap C(u_4)\cap C(v_4)$.
Then $W_4 = \{4, 6, 1, 5\}$, $T_4\subseteq C(u_1)$ and $T_4\setminus C(v_4)\ne \emptyset$.
Since $3\not\in C(u_4)$, by (1), $H$ contains a $(7, T_4\cap R_4)_{(v_4, v_7)}$-path.
Then $(u u_3, u v)\overset{\divideontimes}\to (7, T_4\cap R_4)$.

{\bf Case 2.} $T_4\setminus C(u_2)\ne \emptyset$.
It follows from Corollary~\ref{auv2}(5.2) that $2\in (C(u_1)\cup C(u_4))\cap (B_3\cup B_4)$,
$1\in C(u_4)$, $H$ contains a $(1, T_4)_{(u_1, u_4)}$-path,
and from Corollary~\ref{auv2}(5.1) that $4\in S_u$.
One sees that $|\{1\}| + |\{2, 5\}| + |\{3, 5, 2, 6\}| + |\{4, 6, 1, 5\}| + |\{3, 7, 2, 4\}| + 2t_{uv} + t_0 = 15 + 2(\Delta - 2) + t_0 = 2\Delta + 11 + t_0$.

{Case 2.1.} $1\not\in C(v_4)$.
It follows from Corollary~\ref{auv2}(5.2) and (3) that $T_3\subseteq C(u_1)$ and $t_0\ge t_3\ge 2$.
If $5\not\in C(u_1)$, then $(u u_1, v u_3, u v)\overset{\divideontimes}\to (5, 8, 1)$.
Otherwise, $5\in C(u_1)$.
Thus, $\sum_{x\in N(u)\setminus \{v\}}d(x)\ge 2\Delta + 11 + |\{5\}| + t_0\ge 2\Delta + 12 + 2 = 2\Delta + 14$, a contradiction.

{Case 2.2.} $1\in C(v_4)$.
It follows from Corollary~\ref{auv2d(u)5}(6) that $T_4\setminus C(v_4)\ne\emptyset$.
We may show that mult$_{S_u}(i)\ge 2$ for some $i\in \{3, 7\}$.
\begin{itemize}
\parskip=0pt
\item
    $H$ contains a $(7, j)_{(v_4, v_7)}$-path for some $j\in T_4\cap R_4$.
    If $H$ contains no $(2, 7)_{(u_3, u_2)}$-path, then $(u u_3, u v)\to (7, T_4\cap R_4)$.
    Otherwise, $7\in C(u_2)$ and $H$ contains a $(2, 7)_{(u_2, u_3)}$-path.
    It follows from Corollary~\ref{auv2}(5.1) that $7\in C(u_1)\cup C(u_4)$ and then mult$_{S_u}(7)\ge 2$.
\item
    $H$ contains no $(7, j)_{(v_4, v_7)}$-path for each $j\in T_4\cap R_4$.
    It follows from (1) that $H$ contains a $(3, 4)_{(u_4, v_3)}$-path and $3\in C(u_4)$,
    and follows from Corollary~\ref{auv2}(5.1) that $3\in C(u_2)$ or $H$ contains a $(2, 3)_{(u_1, u_2)}$-path, i.e., $3\in C(u_1)\cup C(u_2)$ and then mult$_{S_u}(3)\ge 2$.
\end{itemize}

Then $\sum_{x\in N(u)\setminus \{v\}}d(x)\ge 2\Delta + 11 + |\{7/3\}| + t_0\ge 2\Delta + 12 + t_0$.
One sees from $t_3\ge 2$ that $T_3\setminus C(u_1)\ne \emptyset$,
and from Corollary~\ref{auv2}(5.2) that that $1\in C(u_2)$.
Then $\sum_{x\in N(u)\setminus \{v\}}d(x)\ge 2\Delta + 12 + |\{1\}| + t_0 = 2\Delta + 13 +t_0$.
It follows that $\sum_{x\in N(u)\setminus \{v\}}d(x) = 2\Delta + 13$,
$6\not\in C(u_1)\cup C(u_2)$ and $5\not\in C(u_2)$.
Since $W_4 = \{4, 6, 1, 5, 3\}$ or $H$ contains a $(7, T_4\cap R_4)_{(v_4, v_7)}$-path,
$H$ contains no $(7, j)_{(u_4, v_7)}$-path for each $j\in T_4\cap R_4$.
One sees that mult$_{S_u}(6) = 1$.
It follows from Corollary~\ref{auv2}(4) that $H$ contains a $(5, T_4\cap R_4)_{(u_3, u_4)}$-path.
Then $(u u_1, u u_4, v u_3, u v)\overset{\divideontimes}\to (5, T_4\cap R_4, 8, 1)$.

Hence, (3) holds and likewise, (4) holds.

For (5), $W_3 = \{3, 5, a_3, 6, a\}$ and $W_4 = \{4, 6, 1, 5\}$.
Then $H$ contains a $(1, T_4)_{(u_1, u_4)}$-path and $T_4\subseteq C(u_1)$.
It follows from Corollary~\ref{auv2}(5.1) that $4, 7\in S_u$.
By (4), $H$ contains a $(5, T_4)_{(u_3, u_4)}$-path.
If $5\not\in C(u_1)$, one sees that $H$ contains a $(2, 5)_{(u_2, u_3)}$-path,
then $(u u_1, u u_4, v u_3, u v)\overset{\divideontimes}\to (5, T_4, 8, T_3)$.
If $5\not\in C(u_1)$, one sees that $H$ contains a $(1, 5)_{(u_1, u_3)}$-path,
then $(u u_2, u u_3, u v)\overset{\divideontimes}\to (5, T_3, 2)$.
Otherwise, $5\in C(u_1)\cap C(u_2)$.
Since $1, 5, 4/2/3, 7/2/3\in C(u_1)$, $T_3\setminus C(u_1)\ne\emptyset$.
It follows from Corollary~\ref{auv2}(5.2) that $a_3 = 2$ and $H$ contains a $(2, T_3)_{(u_2, u_3)}$-path,
and $1\in B_3\cup B_4$ and $1\in C(u_2)\cup C(u_3)$.
Since $2, 5, 3/1/4\in C(u_2)$, $T_4\setminus C(u_2)\ne\emptyset$.
It follows from Corollary~\ref{auv2}(5.2) that $2\in B_3$, $2\in C(u_2)$,
$3\in S_u$,
and from  Corollary~\ref{auv2d(u)5}(6) that $T_3\setminus C(v_3)\ne\emptyset$.
One sees that $\sum_{x\in N(u)\setminus \{v\}}d(x)\ge |\{1, 2, 5\}| + |\{2, 5\}| + |\{3, 5, 2, 6\}| + |\{4, 6, 1, 5\}| + |\{3, 4, 7, 1\}| + 2t_{uv} + t_0 = 17 + 2(\Delta - 2) + t_0 = 2\Delta + 13 + t_0$.
\begin{itemize}
\parskip=0pt
\item
    $H$ contains a $(7, j)_{(v_3, v_7)}$-path for some $j\in T_3\cap R_3$.
    If $H$ contains no $(1, 7)_{(u_1, u_4)}$-path, then $(u u_4, u v)\to (7, T_3\cap R_3)$.
    Otherwise, $7\in C(u_1)$ and $H$ contains a $(1, 7)_{(u_1, u_4)}$-path.
    It follows from Corollary~\ref{auv2}(5.1) that $7\in C(u_2)\cup C(u_3)$ and then mult$_{S_u}(7)\ge 2$.
\item
    $H$ contains no $(7, j)_{(v_3, v_7)}$-path for each $j\in T_3\cap R_3$.
    It follows from (2) that $H$ contains a $(3, 4)_{(u_3, v_4)}$-path and $4\in C(u_3)$,
    and follows from Corollary~\ref{auv2}(5.1) that $4\in C(u_1)$ or $H$ contains a $(2, 4)_{(u_1, u_2)}$-path, i.e., $4\in C(u_1)\cup C(u_2)$ and then mult$_{S_u}(4)\ge 2$.
\end{itemize}

Then for some $i\in \{4, 7\}$, mult$_{S_u}(i)\ge 2$ and $\sum_{x\in N(u)\setminus \{v\}}d(x)\ge 2\Delta + 13 + t_0 + |\{4/7\}| = 2\Delta + 14$, a contradiction.
Hence, (5) holds and likewise, (6) holds.

For (7), let $2, 7\in W_3$ and assume that $H$ contains a $(7, 8)_{(u_3, v_7)}$-path, where $8\in T_3$.
Then $t_4 = 4$, $c_{34} = 0$ and $d(u_3) = d(u_4) = \Delta = 7$.
If $7\not\in C(u_4)$ and $H$ contains no $(7, 1/2/3)_{(u_4, u)}$-path,
then $(u u_4, u v)\overset{\divideontimes}\to (7, 8)$.
Otherwise, $7\in C(u_4)$ and $H$ contains a $(7, 1/2/3)_{(u_4, u)}$-path.
If $7\not\in S_u\setminus C(u_3)$, since $H$ contains a $(7, 3)_{(u_4, u_3)}$-path,
then $(u u_1, u u_4, u v)\overset{\divideontimes}\to (7, T_4, 8)$.
Otherwise, $7\in S_u\setminus C(u_3)$.
\begin{itemize}
\parskip=0pt
\item
    $T_4\subseteq C(u_2)$.
    Then $C(u_2) = \{2, 3/1/4\}\cup T_3\cup T_4$.
    If $1\not\in B_4$, it follows from Corollary~\ref{auv2}(3) that $T_3\cup T_4\subseteq C(u_1)$,
    then $C(u_1) = \{1\}\cup T_3\cup T_4$, and $(u u_1, u v)\overset{\divideontimes}\to (5, 1)$.
    Otherwise, $1\in B_4\subseteq C(u_4)\cap C(v_4)$.
    When $3\in C(u_2)$,
    if $2\not\in C(u_4)$, then $(u u_2, u u_3, u v)\overset{\divideontimes}\to (5, T_3, 2)$.
    Otherwise, $2\in C(u_4)$ and it follows from \ref{6u35u4} (2) that $5\in C(u_1)$.
    It follows that $(T_3\cup T_4)\setminus C(u_1)\ne \emptyset$,
    and by Corollary~\ref{auv2}(5.2), we are done.
    If $4\in C(u_2)$ and then $W_4 = \{4, 6, 1, 3\}$, then $(u u_2, u u_3, u v)\overset{\divideontimes}\to (5, T_3, 2)$.
    Otherwise, $1\in C(u_2)$ and $H$ contains a $(1, 3)_{(u_1, u_2)}$-path.
    Since $1\in C(v_4)$, by Corollary~\ref{auv2d(u)5}(6), $T_4\setminus C(v_4)\ne \emptyset$.
    It follows from Corollary~\ref{auv2d(u)5}(2.1) that $3\in C(u_4)$ and then $W_4 = \{4, 6, 1, 3\}$.
    It follows that $5, 7\in C(u_1)$ and $C(u_1) = \{1, 3, 5, 7\}\cup T_4$.
    One sees that $H$ contains no $(1/3/4, T_4)_{(u_4, v)}$-path.
    It follows from Corollary~\ref{auv2}(4) that we are done.
\item
    $T_4\setminus C(u_2) \ne \emptyset$.
    One sees that $t_3\ge 3$ and $|\{1\}| + |\{2\}| + |\{3, 5, 2, 6, 7\}| + |\{4, 6, 1\}| + |\{3, 5, 2, 4, 7\}| + 2t_{uv} + t_0\ge 15 + 2(\Delta - 2) + t_0 = 2\Delta + 11 + t_0$.
    It follows that $T_3\setminus T_0\ne\emptyset$ and from Corollary~\ref{auv2}(5.2) and (3) that $1\in B_4$, and $1\in C(u_2)$.
    Then $\sum_{x\in N(u)\setminus \{v\}}d(x)\ge 2\Delta + 11 + t_0 + |\{1\}| = 2\Delta + 12 + t_0$ and it follows from Corollary~\ref{auv2d(u)5}(6) that $T_4\setminus C(v_4)\ne \emptyset$.
    \begin{itemize}
    \parskip=0pt
    \item
        $3, 5\not\in C(u_4)$.
        One sees from \ref{6u35u4}(2) that $H$ contains a $(1, 5)_{(u_1, u_4)}$-path,
        if $5\not\in C(u_2)$, then $(u u_2, u u_3, u v)\overset{\divideontimes}\to (5, T_3, T_4\cap T_2)$.
        Otherwise, $5\in C(u_2)$.
        Then $\sum_{x\in N(u)\setminus \{v\}}d(x)\ge 2\Delta + 12 + t_0 + |\{5\}| = 2\Delta +13 +t_0$.
        It follows that mult$_{S_u}(i) = 1$, $i\in [3, 4]$.
        Note that $4\not\in C(u_2)$ or $H$ contains a $(2, 4)_{(u_1, u_2)}$-path,
        and $3\not\in C(u_1)$ or $H$ contains a $(1, 3)_{(u_1, u_2)}$-path.
        Then by Corollary~\ref{auv2d(u)5}(2.1), we are done.
    \item
        $W_4 = \{4, 6, 1, 3\}$. Then $5\in C(u_1)$ and it follows from Corollary~\ref{auv2}(4) that $6\in C(u_1)$. Then $\sum_{x\in N(u)\setminus \{v\}}d(x)\ge 2\Delta + 12 + t_0 + |\{6\}| = 2\Delta +13 + t_0$.
        It follows that $C(u_1) = \{1, 2, 5, 6, 7\}\cup T_4$ and $C(u_2) = \{2, 1, 4\}\cup T_3$.
        Then $(u u_2, u u_1, u u_3, u v)\overset{\divideontimes} \to (5, 3, T_3, T_4\cap T_2)$.
    \item
        $W_4 = \{4, 6, 1, 5\}$. Then $7\in C(u_1)$ and $H$ contains a $(1, 7)_{(u_1, u_4)}$-path.
        If $7\not\in C(u_2)$, then $(u u_2, u u_3, u v)\to (7, 8, T_4\cap T_2)$.
        Otherwise, $7\in C(u_2)$.
        Then $\sum_{x\in N(u)\setminus \{v\}}d(x)\ge 2\Delta + 12 + t_0 + |\{7\}| = 2\Delta +13 + t_0$.
        It follows that mult$_{S_u}(i) = 1$, $i\in [3, 4]$.
        Note that $4\not\in C(u_2)$ or $H$ contains a $(2, 4)_{(u_1, u_2)}$-path,
        and $3\not\in C(u_1)$ or $H$ contains a $(1, 3)_{(u_1, u_2)}$-path.
        Then by Corollary~\ref{auv2d(u)5}(2.1), we are done.
    \end{itemize}
\end{itemize}

Hence, (7) holds and likewise, (8) holds.
\end{proof}

By symmetry, we can have the following two scenarios.

(2.1.2.1) $W_3 = \{3, 5, 2, 6\}$.
We distinguish whether or not $5\in C(u_4)$:
\begin{itemize}
\parskip=0pt
\item
     $5\in C(u_4)$.
     First, assume that $W_4 = \{4, 6, 5, 1, 2\}$.
     Then by Lemma~\ref{6T33425} (4) and (6),
     $H$ contains a $(6, T_3)_{(u_3, u_4)}$-path and a $(5, T_3)_{(u_3, u_4)}$-path.
     \begin{itemize}
     \parskip=0pt
     \item
          $1\not\in B_4$, it follows from Corollary~\ref{auv2}(5.2) and (3) that $T_3\cup T_4\subseteq C(u_1)$.
          One sees that $|\{1\}| + |\{2\}| + |\{3, 5, 2, 6\}| + |\{4, 6, 1, 2, 5\}| + |\{3, 7\}| + 2t_{uv} + t_0\ge 13 + 2(\Delta - 2) + 2 = 2\Delta + 9 + 2 = 2\Delta + 11$.
          Since $t_4\ge 3$, $T_4\setminus C(u_2)\ne \emptyset$.
          It follows from  Corollary~\ref{auv2}(5.1) that $4\in C(u_1)$ or $H$ contains a $(2, 4)_{(u_1, u_2)}$-path.
          If $H$ contains no $(2, 5)_{(u_1, u_2)}$-path,
          then $(u u_1, u u_4, u v)\overset{\divideontimes}\to (5, T_4\cap T_3, 1)$.
          Otherwise, $H$ contains a $(2, 5)_{(u_1, u_2)}$-path and $2\in C(u_1)$.
          It follows that $4, 3, 7, 5\in C(u_2)$ and $d(u_2)\ge 7$, a contradiction.
     \item
         $1\in B_4\subseteq C(u_4)\cap C(v_4)$.
         It follows from  Corollary~\ref{auv2d(u)5}(6) that $T_4\setminus C(v_4)\ne \emptyset$,
         and then from  Lemma~\ref{6T33425}(1) that $H$ contains a $(7, T_4\cap R_4)_{(v_4, v_7)}$-path.
         If $H$ contains no $(2, 7)_{(u_2, u_3)}$-path,
         then $(u u_3, u v)\overset{\divideontimes}\to (7, T_4\cap R_4)$.
         Otherwise, $H$ contains a $(2, 7)_{(u_2, u_3)}$-path and $7\in C(u_2)$.
         It follows from  Corollary~\ref{auv2}(5.1) that $7\in C(u_1)$.
         Since $2, 7, 3/1\in C(u_2)$, $T_4\setminus C(u_2)\ne\emptyset$.
         It follows that $a_4 = 1$, $H$ contains a $(1, T_4)_{(u_1, u_4)}$-path,
         and from Corollary~\ref{auv2}(5.1) that $4\in C(u_1)$ or $H$ contains a $(2, 4)_{(u_1, u_2)}$-path.
         Since $1, 7, 4/2\in C(u_1)$, $T_3\setminus C(u_1)\ne\emptyset$.
         It follows from Corollary~\ref{Ti1i4Ti2i3} that $H$ contains a $(T_4\cap T_2, T_3\cap T_1)_{(u_2, u_3)}$-path and a $(T_3\cap T_1, T_4\cap T_2)_{(u_1, u_4)}$-path.
         If $5\not\in C(u_2)$, then $vu_4\in \alpha\in T_4\cap T_2$, and $(vu_3, uu_3, uv)\overset{\divideontimes}\to (T_3\cap T_1, 5, T_4\setminus \{\alpha\})$.
         If $6\not\in C(u_2)\cup C(u_1)$, then $vu_3\in \beta\in T_3\cap T_1$,
         and $(v u_4, u u_4, uv)\overset{\divideontimes}\to (T_4\cap T_2, 6, T_3\setminus \{\beta\})$.
         Otherwise, $5\in C(u_2)$ and $6\in C(u_1)\cup C(u_2)$.
         Thus, $\sum_{x\in N(u)\setminus \{v\}}d(x)\ge |\{1, 7\}| + |\{1, 2, 5, 7\}| + |\{3, 5, 2, 6\}| + |\{4, 6, 1, 2, 5\}| + |\{4, 5, 6\}| + 2t_{u v} + t_0\ge 18 + 2(\Delta - 2)= 2\Delta + 14$.
     \end{itemize}

     \quad Next, assume that $|W_4\cap [1, 2]| = 1$.
     When $W_4\cap [1, 2] = \{2\}$, then $T_3\cup T_4\subseteq C(u_2)$.
     It follows from Corollary~\ref{auv2}(5.2) and (3) that $T_3\cup T_4\subseteq C(u_1)$,
     from Corollary~\ref{auv2}(5.1) that $3, 4, 7\in S_u$ and for each $i\in \{3, 4, 7\}$, $i\in C(u_2)$ or $H$ contains a $(i, 1)_{(u_1, u_2)}$-path.
     One sees that $\sum_{x\in N(u)\setminus \{v\}}d(x)\ge |\{1\}| + |\{2\}| + |\{3, 5, 2, 6\}| + |\{4, 6, 2, 5\}| + |\{3, 4, 7\}| + 2t_{u v} + t_0\ge 13 + 2(\Delta - 2) + 4 = 2\Delta + 13$.
     It follows that $\sum_{x\in N(u)\setminus \{v\}}d(x) = 2\Delta + 13$, $t_3 = t_4 = 2$, $\Delta = 7$,
     and $c_{34} = 1$, $1\not\in S_u$.
     Then $3, 4, 7\in C(u_2)$ and $d(u_2)\ge t_3 + t_4 + 4 =8 > 7$.
     In the other case, $W_4\cap [1, 2] = \{1\}$.
     Then $H$ contains a $(1, T_4)_{(u_1, u_4)}$-path,
     and it follows from Corollary~\ref{auv2}(5.1) that $4\in C(u_1)$ or $H$ contains a $(2, 4)_{(u_1, u_2)}$-path.
     For $j\in \{5, 6\}$, if $j\not\in C(u_2)$ and $H$ contains no $(j, i)_{(u_2, u_i)}$-path for each $i\in \{1, 4\}$,
     then $(u u_2, u u_3, u v)\to (j, T_3, 2)$;
     if $j\not\in C(u_1)$ and $H$ contains no $(j, i)_{(u_1, u_i)}$-path for each $i\in [2, 3]$,
     then $(u u_1, u u_4, u v)\to (j, T_4, T_3)$.
     Otherwise, $j\in C(u_2)$ or $H$ contains a $(j, i)_{(u_2, u_i)}$-path for some $i\in \{1, 4\}$,
     and  $j\in C(u_1)$ or $H$ contains a $(j, i)_{(u_1, u_i)}$-path for some $i\in [2, 3]$.
     Then $6/1/4, 5/1/4\in C(u_2)$ and $5/2/3, 6/2/3\in C(u_1)$.
     \begin{itemize}
     \parskip=0pt
     \item
          $C_{34} = \{12\}$.
          It follows from Corollary~\ref{auv2}(5.1) that $3/1, 7/1\in C(u_2)$ and $4/2, 7/2\in C(u_1)$.
          If $12\not\in C(u_1)\cup C(u_2)$, then $u u_1\to 12$ or $u u_2\to 12$ reduces the proof to the above case where $1, 2\in C(u_4)$.
          Otherwise, $12\in C(u_1)\cup C(u_2)$.

          \quad When $1\not\in C(u_1)$, then $3, 7\in C(u_1)$ and it follows from Corollary~\ref{auv2}(5.2) that $T_3\subseteq C(u_1)$.
          One sees that $\sum_{x\in N(u)\setminus \{v\}}d(x)\ge |\{1\}| + |\{2, 3, 7\}| + |\{3, 5, 2, 6\}| + |\{4, 6, 1, 5\}| + |\{4, 12\}| + 2t_{u v} + t_0\ge 14 + 2(\Delta - 2) + 2 = 2\Delta + 12$.
          It follows that $T_4\setminus C(u_2)\ne \emptyset$ and it follows from Corollary~\ref{auv2}(5.2) that $2\in C(u_1)$.
          Thus, $\sum_{x\in N(u)\setminus \{v\}}d(x)\ge2\Delta + 12 + |\{2\}| = 2\Delta + 13$.
          It follows that $5\not\in C(u_1)\cup C(u_2)$.
          One sees from  Lemma~\ref{Ti1i4Ti2i3} that $H$ contains a $(T_2\cap T_4, T_3)_{(u_2, u_3)}$-path.
          Then $vu_4\to \alpha\in T_4\cap T_2$, $(v u_3, u u_3, u v)\overset{\divideontimes}\to (T_3, 5, T_4\setminus\{\alpha\})$.

          \quad In the other case, $1\in C(u_2)$ and likewise $2\in C(u_1)$.
          One sees that $\sum_{x\in N(u)\setminus \{v\}}d(x)\ge |\{1, 2\}| + |\{2, 1\}| + |\{3, 5, 2, 6\}| + |\{4, 6, 1, 5\}| + |\{4, 12, 3, 7\}| + 2t_{uv} + t_0\ge 16 + 2(\Delta - 2) = 2\Delta + 12$.
          It follows that $T_3\setminus C(u_1)\ne \emptyset$ and $T_4\setminus C(u_2)\ne \emptyset$.
          It follows from Corollary~\ref{Ti1i4Ti2i3} that $H$ contains a $(T_4\cap T_2, T_3\cap T_1)_{(u_2, u_3)}$-path and a $(T_3\cap T_1, T_4\cap T_2)_{(u_1, u_4)}$-path.
          If $5\not\in C(u_2)$, then $v u_4\in \alpha\in T_4\cap T_2$,
          and $(vu_3, uu_3, u v)\overset{\divideontimes}\to (T_3\cap T_1, 5, T_4\setminus \{\alpha\})$.
          If $6\not\in C(u_1)$, then $vu_3\in \beta\in T_3\cap T_1$, and $(vu_4, uu_4, uv)\overset{\divideontimes}\to (T_4\cap T_2, 6, T_3\setminus \{\beta\})$.
          Otherwise, $5\in C(u_2)$ and $6\in C(u_1)$.
          Thus, $\sum_{x\in N(u)\setminus \{v\}}d(x)\ge 2\Delta + 12 + |\{5, 6\}| = 2\Delta + 14$.
     \item
         $c_{34} = 0$, i.e., $t_3 + t_4 = \Delta - 2$.
         If $T_4\subseteq C(u_2)$,
         then $C(u_2) = \{2, 1\}\cup T_4\cup T_3$ and $C(u_1) = \{1, 4, 3, 5, 6, 7\}\cup T_4$, i.e.,
         $d(u_2) = \Delta$, $d(u_1)\ge 8$, a contradiction.
         Otherwise, $T_4\setminus C(u_2)\ne \emptyset$,
         and it follows from Corollary~\ref{auv2}(5.2) that $2\in C(u_1)$.
         If $T_3\subseteq C(u_1)$,
         then $C(u_1) = \{2, 1\}\cup T_4\cup T_3$, $4, 5, 6\in C(u_2)$, $3, 7\in C(u_2)\cup C(u_4)$,
         and $\sum_{x\in N(u)\setminus \{v\}}d(x)\ge |\{1, 2\}| + |\{2, 4, 5, 6\}| + |\{3, 5, 2, 6\}| + |\{4, 6, 1, 5\}| + |\{3, 7\}| + 2t_{u v} + t_0\ge 16 + 2(\Delta - 2) + 2 = 2\Delta + 14$, a contradiction.
         Otherwise, $T_3\setminus C(u_1)\ne \emptyset$ and it follows from Corollary~\ref{auv2}(5.2) that $1\in C(u_2)$, from Corollary~\ref{auv2d(u)5}(6) that $T_4\setminus C(v_4)\ne \emptyset$.

         \quad One sees from Corollary~\ref{Ti1i4Ti2i3} that $H$ contains a $(T_4\cap T_2, T_3\cap T_1)_{(u_2, u_3)}$-path and a $(T_3\cap T_1, T_4\cap T_2)_{(u_1, u_4)}$-path.
         If $H$ contains no $(2, 5)_{(u_2, u_3)}$-path, then $vu_4\in \alpha\in T_4\cap T_2$,
         and $(v u_3, u u_3, u v)\overset{\divideontimes}\to (T_3\cap T_1, 5, T_4\setminus \{\alpha\})$.
         If $H$ contains no $(1, 6)_{(u_1, u_4)}$-path, then $v u_3\in \beta\in T_3\cap T_1$,
         and $(v u_4, u u_4, u v)\overset{\divideontimes}\to (T_4\cap T_2, 6, T_3\setminus \{\beta\})$.
         Otherwise, $5\in C(u_2)$, $6\in C(u_1)$, and $H$ contains a $(2, 5)_{(u_2, u_3)}$-path, a $(1, 6)_{(u_1, u_4)}$-path.
         Then $\sum_{x\in N(u)\setminus \{v\}}d(x)\ge |\{1, 2, 6\}| + |\{2, 1, 5\}| + |\{3, 5, 2, 6\}| + |\{4, 6, 1, 5\}| + |\{3, 4, 7\}| + 2t_{uv} + t_0\ge 17 + 2(\Delta - 2) = 2\Delta + 13$.
         It follows that for each $i\in [3, 4]$, mult$_{S_u}(i)  = 1$, and $i\in [5, 6]$, mult$_{S_u}(i)  = 2$.
         Hence, $3\in C(u_1)$, $4\in C(u_1)$.
         Then by Corollary~\ref{auv2d(u)5}(2.1), we are done.
     \end{itemize}
\item
     $5\not\in C(u_4)$.
     \begin{itemize}
     \parskip=0pt
     \item
          $C_{34} = 12$. Then $d(u_3) = d(u_4) = \Delta = 7$ and $t_3 = t_4 = 2$.
          When for each $j\in T_4$, $H$ contains no $(7, j)_{(v_4, v_7)}$-path,
          it follows from Corollary~\ref{auv2}(4) that $6\in B_3$ $H$ contains a $(6, i)_{(u_4, u_i)}$-path for some $i\in [1, 2]$.
          If $6\not\in C(u_1)$, one sees that $H$ contains a $(2, 6)_{(u_2, u_4)}$-path,
          then $(u u_1, v u_4, u v)\overset{\divideontimes}\to (6, T_4, 1)$.
          If $6\not\in C(u_2)$, one sees that $H$ contains a $(1, 6)_{(u_1, u_4)}$-path,
          then $v u_4\to \alpha\in T_4$ and $(u u_2, u u_3, u v)\overset{\divideontimes}\to (6, T_3, T_4\setminus \{\alpha\})$.
          Otherwise, $6\in C(u_1)\cap C(u_2)$.
          If $H$ contains no $(i, j)_{(u_3, u_4)}$-path for some $i\in T_3$ and $j\in T_4$,
          then $(v u_3, v u_4, u v)\overset{\divideontimes}\to (i, j, 5)$.
          Otherwise, $H$ contains a $(T_3, T_4)_{(u_3, u_4)}$-path.
          It follows from Corollary~\ref{Ti1i4Ti2i3} that $T_4\subseteq C(u_2)$ and then $C(u_1) = \{1, 2, 6\}\cup T_3\cup T_4$, $3, 7\in C(u_2)$.
          One sees that $\sum_{x\in N(u)\setminus \{v\}}d(x)\ge |\{1, 3, 7, 6\}| + |\{2, 1, 6\}| + |\{3, 5, 2, 6\}| + |\{4, 6, a_1, b\}| + 2t_{u v} + t_0\ge 15 + 2(\Delta - 2) = 2\Delta + 11$.
          It follows that $(T_3\cup T_4)\setminus C(u_1)\ne \emptyset$,
          and from Corollary~\ref{auv2}(4) that $1\in B_4$.
          Hence, $1\in C(u_4)\cap C(v_4)$ and it follows from Corollary~\ref{auv2d(u)5}(6) that $T_4\setminus C(v_4)\ne \emptyset$.
          One sees that $4\not\in C(u_2)$ and $H$ contains a $(1, 3)_{(u_1, u_2)}$-path.
          Then by Corollary~\ref{auv2d(u)5}(2.1), $3\in C(u_4)$ and then $T_4\subseteq C(u_1)$, $C(u_1) = \{1, 6, 3, 7\}\cup T_4$.
          Then $(u u_1, u u_4)\overset{\divideontimes}\to (4, T_4, T_3)$.

          \quad In the other case, $7\in C(u_4)$ and $H$ contains a $(7, 10)_{(u_4, v_7)}$-path, where $10\in T_4$. If $H$ contains no $(2, 7)_{(u_2, u_3)}$-path, then $(u u_3, u v)\overset{\divideontimes}\to (7, 10)$.
          Otherwise, $H$ contains a $(2, 7)_{(u_2, u_3)}$-path.
          When $2\in C(u_4)$, then $T_4\subseteq C(u_4)$.
          It follows from Corollary~\ref{auv2}(5.1) that $3/1, 4/1\in C(u_2)$ and from Corollary~\ref{auv2}(5.2) that $T_3\cup T_4\subseteq C(u_1)$.
          Thus, $C(u_1) = \{2, 7, 1\}\cup T_3\cup T_4$, $\{1, 3, 4\}\cup T_3\cup T_4\subseteq C(u_1)$, a contradiction.
          In the other case, $1\in C(u_4)$, $H$ contains a $(1, T_4)_{(u_1, u_4)}$-path,
          and it follows from Corollary~\ref{auv2}(5.1) that $4\in C(u_1)$ or $H$ contains a $(2, 4)_{(u_1, u_2)}$-path,
          from \ref{6u35u4}(2) that $H$ contains a $(1, 5)_{(u_1, u_4)}$-path, $5\in C(u_1)$.
          If $7\not\in C(u_1)$, then $(u u_1, u u_4, u v)\overset{\divideontimes}\to (7, 10, 1)$.
          If $5\not\in C(u_2)$, then $(u u_2, u u_3, u v)\overset{\divideontimes}\to (5, T_3, 2)$.
          Otherwise, $5\in C(u_2)$, $7\in C(u_1)$,
          and then $T_3\setminus C(u_1)\ne \emptyset$, $T_4\setminus C(u_2)\ne \emptyset$.
          It follows that $1\in B_4$ and from Corollary~\ref{auv2}(5.2) that $1\in C(u_2)$, $2\in C(u_1)$,
          from Corollary~\ref{auv2d(u)5}(6) that $T_{uv}\setminus C(v_4)\ne \emptyset$.
          Then $\sum_{x\in N(u)\setminus \{v\}}d(x)\ge |\{1, 2, 5, 7\}| + |\{2, 7, 1, 5\}| + |\{3, 5, 2, 6\}| + |\{4, 6, 1, 7\}| + |\{3, 4\}| + 2t_{uv} + t_0\ge 18 + 2(\Delta - 2) = 2\Delta + 14$.
     \item
         $c_{34} = 0$, i.e., $t_3 + t_4 = \Delta - 2$.
         When $T_4\subseteq C(u_2)$, it follows that $C(u_2) = \{2, 1\}\cup T_4\cup T_3$ and $3, 7, 5\in C(u_1)$.
         One sees that $\sum_{x\in N(u)\setminus \{v\}}d(x)\ge |\{1, 3, 5, 7\}| + |\{2, 1\}| + |\{3, 5, 2, 6\}| + |\{4, 6, a_1, b\}| + 2t_{u v} + t_0\ge 14 + 2(\Delta - 2) = 2\Delta + 10$.
         Then $(T_3\cup T_4)\setminus C(u_1)\ne \emptyset$.
         It follows from Corollary~\ref{auv2}(5.2) and (3) that $1\in B_4$,
         and from Corollary~\ref{auv2d(u)5}(6) that $T_4\setminus C(v_4)\ne \emptyset$.
         However, by Lemma~\ref{6T33425}(1), we are done since $5, 7\not\in C(u_2)$.

         \quad In the other case, $T_4\setminus C(u_2)\ne \emptyset$.
         It follows that $1\in C(u_4)$, $H$ contains a $(1, T_4)_{(u_1, u_4)}$-path,
         and from Corollary~\ref{auv2}(5.2) that $2\in C(u_1)\cup C(u_4)$,
         from Corollary~\ref{auv2}(5.1) that $4\in C(u_1)$ or $H$ contains a $(2, 4)_{(u_1, u_2)}$-path.
         When $T_3\subseteq C(u_1)$, i.e., $C(u_1) = \{1, 2/4\}\cup T_3\cup T_4$,
         if $H$ contains no $(2, 5)_{(u_1, u_2)}$-path, then $(u u_1, u u_4, u v)\overset{\divideontimes}\to (5, T_4\cap T_2, 1)$.
         if $H$ contains no $(2, 6)_{(u_1, u_2)}$-path, then $(u u_1, u u_4, u v)\overset{\divideontimes}\to (6, T_4\cap T_2, 1)$.
         Otherwise, $2\in C(u_1)$ and $H$ contains a $(2, [5, 6])_{(u_1, u_2)}$-path, $4, 5, 6\in C(u_2)$.
         It follows from \ref{6u35u4}(2) that $3\in C(u_4)$ and $H$ contains a $(3, 5)_{(u_4, u_3)}$-path.
         One sees that $\sum_{x\in N(u)\setminus \{v\}}d(x)\ge |\{1, 2\}| + |\{2, 4, 5, 6\}| + |\{3, 5, 2, 6\}| + |\{4, 6, 1, 3\}| + |\{7\}| +  2t_{u v} + t_0\ge 15 + 2(\Delta - 2) + 2 = 2\Delta + 13$.
         It follows that $3\not\in C(u_2)$.
         Then $(u u_2, u u_3, u u_4, u v)\overset{\divideontimes}\to (3, T_3, 5, 2)$.

         \quad Now assume that $T_3\setminus C(u_1)\ne \emptyset$.
         It follows from Corollary~\ref{auv2}(5.2) that $1\in C(u_2)$,
         from Corollary~\ref{auv2}(5.2) and (3) that $1\in B_4$.
         Further, it follows from Corollary~\ref{auv2d(u)5}(6) that $T_4\setminus C(v_4)\ne \emptyset$.
         If $H$ contains no $(7, j)_{(u_4, v_7)}$-path for each $j\in T_4\cap T_2$,
         one sees from Corollary~\ref{Ti1i4Ti2i3} that $H$ contains a $(T_1\cap T_3, T_4\cap T_2)_{(u_1, u_4)}$-path,
         then $(v u_3, v u_4, u v)\overset{\divideontimes}\to (T_3\cap T_1, T_4\cap T_2, 5)$.
         Otherwise, $7\in C(u_4)$ and $H$ contains a $(7, 10)_{(u_4, v_7)}$-path, where $10\in T_4\cap T_2$.
         If $H$ contains no $(2, 7)_{(u_2, u_3)}$-path, then $(u u_3, u v)\overset{\divideontimes}\to (7, 10)$.
         Otherwise, $H$ contains a $(2, 7)_{(u_2, u_3)}$-path, $7\in C(u_2)$.
         If $7\not\in C(u_1)$, then $(u u_1, u u_4, u v)\overset{\divideontimes}\to (7, 10, 1)$.
         Otherwise, $7\in C(u_1)$.
         One sees that $\sum_{x\in N(u)\setminus \{v\}}d(x)\ge |\{1, 7\}| + |\{2, 1, 7\}| + |\{3, 5, 2, 6\}| + |\{4, 6, 1, 7\}| + |\{3, 4, 2, 5\}| +  2t_{u v} + t_0\ge 17 + 2(\Delta - 2) = 2\Delta + 13$.
         It follows that $t_0 = 0$, for each $i\in [2, 5]$, mult$_{S_u}(i) = 1$,
         and $6\not\in S_u\setminus C(u_3)$.
         If $4\not\in C(u_2)$, then $(u u_2, u u_3, u v)\overset{\divideontimes}\to (6, T_3, T_4)$.
         If $3\not\in C(u_1)$, then $(u u_1, u u_4, u v)\overset{\divideontimes}\to (6, T_4, 1)$.
         Otherwise, $4\in C(u_2)$, $3\in C(u_1)$, and $H$ contains a $(2, 4)_{(u_1, u_2)}$-path, a $(1, 3)_{(u_1, u_2)}$-path.
         Recall that $T_4\setminus C(v_4)\ne \emptyset$.
         Then by Corollary~\ref{auv2d(u)5}(2.1), we are done.
     \end{itemize}
\end{itemize}

(2.1.2.2) $W_3 = \{3, 5, a_3, 6, a\}$ and $W_4 = \{4, 6, 1, b\}$ with $5\not\in C(u_4)$.
Then $t_3 = 3$, $t_4 = 2$, $c_{34} = 0$ and $d(u_3) = d(u_4) = \Delta = 7$.
One sees that $|\{1\}| + |\{2\}| + |\{3, 5, a_3, 6, a\}| + |\{4, 6, 1, b\}| + |\{5\}| + 2t_{u v} + t_0\ge 12 + 2(\Delta - 2) + t_0 = 2\Delta + 8 + t_0$.
One sees from Lemma~\ref{6T33425} (7) that if $H$ contains no $(7, j)_{(u_4, v_7)}$-path,
then $H$ contains a $(T_3, T_4)_{(u_3, u_4)}$-path.
\begin{itemize}
\parskip=0pt
\item
     $W_4 = \{4, 6, 1, 2\}$.
     It follows from \ref{6u35u4}(2) that $H$ contains a $(5, i)_{(u_4, u_i)}$-path for some $i\in [1, 2]$,
     $5\in C(u_1)\cup C(u_2)$,
     and from Corollary~\ref{auv2}(4) that $H$ contains a $(6, i)_{(u_4, u_i)}$-path for some $i\in [1, 2]$,
     $6\in C(u_1)\cup C(u_2)$.
     When $T_4\subseteq C(u_1)\cap C(u_2)$, we have $t_0\ge t_4\ge 2$,
     $\sum_{x\in N(u)\setminus \{v\}}d(x)\ge 2\Delta + 8 + t_0 + |\{6\}| = 2\Delta + 9 + t_0\ge 2\Delta + 9 + 2 = 2\Delta + 11$.
     One sees from $t_3\ge 3$ that $T_3\setminus T_0 \ne \emptyset$.
     Assume w.l.o.g. $T_3\setminus C(u_1)\ne \emptyset$.
     Then $H$ contains a $(2, T_3)_{(u_2, u_3)}$-path.
     It follows from Corollary~\ref{auv2}(5.1) that $3/1/4\in C(u_2)$ and then $C(u_2) = \{2, 3/1/4\}\cup T_3\cup T_4$.
     Then $u u_1\to \gamma\in T_3\cap T_1$ and $(u u_4, u v)\overset{\divideontimes}\to (5, T_3\setminus \{\gamma\})$.
     In the other case, assume w.l.o.g. $T_4\subseteq C(u_1)$ and $T_4\setminus C(u_2)\ne \emptyset$.
     It follows from Corollary~\ref{auv2}(5.1) that $4/2/3\in C(u_1)$.
     If $T_3\cap T_1\ne \emptyset$, then $uu_1\to \gamma\in T_3\cap T_1$,
     $(u u_4, u v)\overset{\divideontimes}\to (T_4\cap T_2, T_3\setminus \{\gamma\})$.
     Otherwise, $T_3\subseteq C(u_1)$ and then $C(u_1) = \{1, 4/2/3\}\cup T_3\cup T_4$.
     Then $(u u_2, u u_4, u v)\to (T_4\cap T_2, 5, T_3)$.
\item
     $W_4 = \{4, 6, 1, 3/7\}$.
     Then $T_4\subseteq C(u_1)$ and it follows from Corollary~\ref{auv2}(5.1) that $4/2/3\in C(u_1)$.
     When $H$ contains no $(7, j)_{(u_4, v_7)}$-path, then $H$ contains a $(T_3, T_4)_{(u_3, u_4)}$-path.
     If $T_3\cap T_1\ne \emptyset$, then $uu_1\to \gamma\in T_3\cap T_1$, $(u u_4, u v)\overset{\divideontimes}\to (T_4, T_3\setminus \{\gamma\})$.
     Otherwise, $T_3\subseteq C(u_1)$ and then $C(u_1) = \{1, 4/2/3\}\cup T_3\cup T_4$.
     Then $(u u_4, u v)\overset{\divideontimes}\to (5, T_3)$ if $7\in C(u_4)$,
     or $(u u_4, v u_4, u v)\overset{\divideontimes}\to (6, T_4, T_3)$.
     In the other case, $W_4 = \{4, 6, 1, 7\}$ and $H$ contains a $(7, 8)_{(u_4, v_7)}$-path, where $8\in T_4$.
     Then it follows from \ref{6u35u4}(2) that $H$ contains a $(1, 5)_{(u_1, u_4)}$-path and $5\in C(u_1)$.
     Since $1, 4/2/3, 5\in C(u_1)$, $T_3\setminus C(u_1)\ne\emptyset$.
     It follows that $a_3 = 2$, $H$ contains a $(2, T_3)_{(u_2, u_3)}$-path,
     and from Corollary~\ref{auv2}(5.1) that $3\in C(u_2)$ or $H$ contains a $(3, 1)_{(u_1, u_2)}$-path,
     from Corollary~\ref{auv2}(5.2) and (3) that $1\in B_3\cup B_4$, $1\in C(u_2)\cup C(u_3)$.
     Further, it follows Corollary~\ref{auv2d(u)5}(6) that $T_4\setminus C(v_4)\ne \emptyset$.
     If $7\not\in C(u_1)$ and $H$ contains no $(7, i)_{(u_1, u_i)}$-path for each $i\in [2, 3]$,
     then $(u u_1, u u_4, u v)\overset{\divideontimes}\to (7, 8, T_3\cap T_1)$.
     Otherwise, $7\in C(u_1)$ or $H$ contains a $(7, i)_{(u_1, u_i)}$-path for some $i\in [2, 3]$.
     If $5\not\in C(u_2)$, then $(u u_2, u u_3, u v)\to (5, T_3, 2)$.
     Otherwise, $5\in C(u_2)$ and then $T_4\setminus C(u_2)\ne\emptyset$.
     It follows from Corollary~\ref{auv2}(5.2) that $2\in B_3\cap C(u_1)$.
     One sees that $\sum_{x\in N(u)\setminus \{v\}}d(x)\ge |\{1, 2, 5\}| + |\{2, 5\}| + |\{3, 5, 2, 6\}| + |\{4, 6, 1, 7\}| + |\{1, 3, 4, 7\}| + 2t_{u v} + t_0\ge 17 + 2(\Delta - 2) + t_0 = 2\Delta + 13 + t_0$.
     It follows that $t_0 = 0$, for each $i\in \{1, 3, 4, 7\}$, mult$_{S_u}(i) = 1$, and $6\not\in C(u_1)$.
     If $H$ contains a $(7, j)_{(v_4, v_7)}$-path for some $j\in T_4\cap R_4$,
     then $(v u_4, u u_4, u v)\overset{\divideontimes}\to (j, 6, T_3)$.
     Otherwise, $H$ contains no $(7, j)_{(v_4, v_7)}$-path.
     Since $3\not\in C(u_4)$, it follows from Lemma~\ref{6T33425}(1) that $7\in C(u_3)$.
     By Corollary~\ref{auv2d(u)5}(2.1), we are done.
\end{itemize}

{Case 2.2.} $\{i_1, i_2\} = \{1, 3\}$ and $c(v u_3) = 1$.
Then $C(u_3) = \{1, 3\}\cup T_{u v}$.
There are two subcases on whether $c(v u_4) = 3$ or not.

(2.2.1.) $c(v u_4) = 3$.
If $C(u_4) = \{3, 4\}\cup T_{u v}$, then $(u u_4, u v)\overset{\divideontimes}\to (5, 4)$.
Otherwise, $T_4\ne\emptyset$ and assume w.l.o.g. $8\in T_4$.
It follows from Lemma~\ref{auvge2}(1.2) that w.l.o.g. $H$ contains a $(8, 5)_{(u_4, v_5)}$-path and $5\in C(u_4)$.
If $H$ contains no $(2, 8)_{(u_2, u_4)}$-path, then $(u u_4, u v)\overset{\divideontimes}\to (8, 4)$.
Otherwise, $H$ contains a $(2, 8)_{(u_2, u_4)}$-path and $8\in C(u_2)$, $2\in C(u_4)$.
It follows that $T_4\subseteq C(u_2)$.
If there exists a $i\in \{3, 5\}\setminus C(u_2)$,
then $(v u_3, u u_3, u v)\overset{\divideontimes}\to (2, i, 8)$.
Otherwise, $3, 5\in C(u_2)$.
If there exists a $i\in [6, 7]\setminus (C(u_2)\cup C(u_4))$,
then $(u u_2, u v)\to (i, 2)$ or $(u u_4, u v)\to (i, 4)$.
Otherwise, $6, 7\in C(u_2)\cup C(u_4)$.
One sees that $d(u_2) + d(u_4)\ge |\{2, 3, 5\}| + |\{4, 3, 2, 5\}| + |\{6, 7\}| + t_{u v} = \Delta + 7$.
It follows that $d(u_2) = d(u_4) = \Delta = 7$ and $1, 4\not\in C(u_2)$.
Then $(u u_2, u u_4, u v)\overset{\divideontimes}\to (4, 8, 2)$.

(2.2.2.) $c(v u_4) = 5$.
If $H$ contains no $(i, 5)_{(u_1, u_4)}$-path for some $i\in [1, 2]$,
then $(u u_3, v u_3)\to (5, i)$ reduces the proof to (2.2.1.).
Otherwise, $H$ contains a $(i, 5)_{(u_1, u_4)}$-path for each $i\in [1, 2]$.
It follows that $5\in C(u_1)\cap C(u_2)$ and $1, 2\in C(u_4)$.
If $H$ contains no $(i, 4)_{(u_4, v_i)}$-path for some $i\in \{3, 6, 7\}$,
then $(v u_3, u u_3)\to (4, i)$ reduces the proof to (2.1.1.).
Otherwise, $H$ contains a $(i, 4)_{(u_4, v_i)}$-path for each $i\in \{3, 6, 7\}$.
It follows that $3, 6, 7\in C(u_4)$ and then $C(u_4) = [1, 7]$.
Since $2, 4\not\in C(u_3)$, it follows from Corollary~\ref{auv2d(u)5}(4.2) that $T_{u v}\subseteq C(u_2)$.
Hence, $C(u_2) = \{2, 5\}\cup T_{u v}$ and $d(u_1)\le 6$.
One sees clearly that $T_1\ne \emptyset$.
Assume w.l.o.g. $8\in T_1$.
It follows that $H$ contains a $(3, 8)_{(u_3, v_3)}$-path and from Corollary~\ref{auv2}(1.2) that $H$ contains a $(2, 8)_{(u_1, u_2)}$-path.
Then $(u u_4, u v)\overset{\divideontimes}\to (8, 4)$\footnote{Or by Corollary~\ref{auv2d(u)5}(4.2),
for each $j\in T_4$, $H$ contains a $(i, j)_{(u_4, u_i)}$-path for some $i\in [1, 3]$, a contradiction. }.

{Case 2.3.} $\{i_1, i_2\} = \{1, 2\}$ and $(v u_3, v u_4)_c = (1, 2)$.
Then $T_{u v}\subseteq C(u_3)\cup C(u_4)$.
Let $C(u_3) = W_3\cup T_4\cup C_{3 4}$, $C(u_4) = W_4\cup T_3\cup C_{3 4}$.
Then $T_3\subseteq B_2\subseteq C(u_2)$, $T_4\subseteq B_1\subseteq C(u_1)$.
One sees from $t_3 + t_4 + c_{3 4} = t_{u v}$, $w_3 + w_4 + t_{u v} + c_{3 4}\le \Delta + 7$ that $w_3 + w_4\le 9$,
$c_{3 4}\le \Delta + 7 - (w_3 + w_4 + t_{u v}) = 9 - (w_3 + w_4)$ and $t_3 + t_4\ge 2(\Delta - 2) + w_3 + w_4 - (\Delta + 7) = \Delta + (w_3 + w_4) - 11$.
If $C(u_3) = \{1, 3\}\cup T_{uv}$, then $vu_3\to 4$ reduces the proof to (2.2.1.).
Otherwise, $T_3\ne \emptyset$ and likewise, $T_4\ne\emptyset$.
It follows from Lemma~\ref{auvge2}(1.2) that there exists a $b_1\in [5, 7]\cap C(u_3)$, $b_2\in [5, 7]\cap C(u_4)$, say $b_1 = 5$,
and $H$ contains a $(5, \alpha)_{(u_3, v_5)}$-path for some $\alpha\in T_3$, and a $(b_2, \beta)_{(u_4, v_{b_2})}$-path for some $\beta\in T_4$.

If $5\not\in C(u_2)$ and $H$ contains no $(5, i)_{(u_2, u_i)}$-path for each $i\in \{1, 3, 4\}$,
then $(u u_2, u v)\overset{\divideontimes}\to (5, \alpha)$.
If $b_2\not\in C(u_1)$ and $H$ contains no $(b_2, i)_{(u_1, u_i)}$-path for each $i\in \{2, 3, 4\}$,
then $(u u_1, u v)\overset{\divideontimes}\to (b_2, \beta)$.

Otherwise, we proceed with the following proposition:
\begin{proposition}
\label{5134}
\begin{itemize}
\parskip=0pt
\item[{\rm (1)}]
	$5\in C(u_2)$ or $H$ contains a $(5, i)_{(u_2, u_i)}$-path for some $i\in \{1, 3, 4\}$;
\item[{\rm (2)}]
	$b_2\in C(u_1)$ or $H$ contains a $(b_2, i)_{(u_1, u_i)}$-path for some $i\in \{2, 3, 4\}$;
\end{itemize}
\end{proposition}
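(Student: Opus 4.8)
The plan is to prove Proposition~\ref{5134} exactly as the contrapositive of the two recolouring claims recorded immediately before its statement, so that whenever the proposition fails one of the two displayed recolourings yields a valid acyclic edge $(\Delta+5)$-coloring of $G$. Concretely, for part~(1) I would assume both disjuncts fail, i.e.\ $5\notin C(u_2)$ and $H$ contains no $(5,i)_{(u_2,u_i)}$-path for any $i\in\{1,3,4\}$, and then verify that $(uu_2,uv)\overset{\divideontimes}\to(5,\alpha)$ is legitimate; part~(2) is handled symmetrically by $(uu_1,uv)\overset{\divideontimes}\to(b_2,\beta)$, exchanging the roles of $(u_3,5,v_5)$ and $(u_4,b_2,v_{b_2})$ and using $T_4\subseteq C(u_1)$ in place of $T_3\subseteq C(u_2)$.

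First I would check properness. Recolouring $uu_2$ from $2$ to $5$ is proper since $5\notin C(u_2)$ by hypothesis and $5\notin\{1,3,4\}=C(u)\setminus\{2\}$. Colouring the reinstated edge $uv$ with $\alpha$ is proper because $\alpha\in T_3\subseteq T_{uv}$ and $T_{uv}$ is disjoint from $C(u)\cup C(v)=\{1,2,3,4\}\cup\{1,2,5,6,7\}$; in particular $\alpha\notin\{1,3,4,5\}$ (recall $5\in C(v)$, so $5\notin T_{uv}$) and $\alpha\notin C(v)$. Next I would verify acyclicity. A new bichromatic cycle through the recoloured edge $uu_2=5$ would close a $(5,i)_{(u_2,u_i)}$-path with $i\in C(u_2)\cap\{1,3,4\}$, and the hypothesis that no such path exists rules these out. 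A new bichromatic cycle through $uv=\alpha$ must be an $(\alpha,j)$-cycle for $j\in\{1,5\}$, the two colours common to $u$ and $v$ after the changes.

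For $j=1$ such a cycle would require a $(1,\alpha)_{(u_1,u_3)}$-path: since $\alpha\in T_3$ gives $\alpha\notin C(u_3)$, such a path must reach $u_3$ through its unique colour-$1$ edge $u_3v$, hence through $v$; but $\alpha\notin C(v)$ in $H$, so no $(1,\alpha)$-path can pass through $v$, and none exists. For $j=5$ the cycle would require a $(5,\alpha)_{(u_2,v_5)}$-path in the recoloured graph, and here I would invoke Lemma~\ref{lemma06} together with the already-established $(5,\alpha)_{(u_3,v_5)}$-path (supplied by Lemma~\ref{auvge2}(1.2)) to exclude a second $(5,\alpha)$-path reaching $v_5$.

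The main obstacle I anticipate is precisely this last exclusion. Because $T_3\subseteq C(u_2)$ we have $\alpha\in C(u_2)$, and recolouring $uu_2\to5$ installs a colour-$5$ edge at $u_2$, so a priori a $(5,\alpha)_{(u_2,v_5)}$-path could be assembled from the $\alpha$-edge at $u_2$ and the ambient $5/\alpha$ structure. The delicate point is to apply the maximality conclusion of Lemma~\ref{lemma06} to the $(5,\alpha)$-path terminating at $v_5$ and argue that, once that path occupies $v_5$, no $(5,\alpha)$-path can reach $v_5$ from $u_2$; this requires tracking that the recolouring alters only the single edge $uu_2$ and does not splice two shorter $(5,\alpha)$-segments into a fresh path to $v_5$. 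The symmetric verification for part~(2) is identical after substituting $C(u_1)$, $b_2$, $\beta$ and $v_{b_2}$, with $\alpha\in T_4$ replaced by $\beta\in T_4$ and $T_4\subseteq C(u_1)$ playing the role of $T_3\subseteq C(u_2)$.
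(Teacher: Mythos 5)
Your proposal is correct and follows exactly the paper's own route: the proposition is established as the contrapositive of the two displayed recolourings $(uu_2,uv)\overset{\divideontimes}\to(5,\alpha)$ and $(uu_1,uv)\overset{\divideontimes}\to(b_2,\beta)$, which the paper asserts without detail and you verify. Your verification is sound — in particular the $(5,\alpha)$-case is settled because $v$'s component in the $\{5,\alpha\}$-subgraph is the path from $u_3$ through $v_5$ to $v$ guaranteed by Lemma~\ref{auvge2}(1.2), and $u_2$, having no $5$-coloured edge, cannot lie on it.
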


One sees that if $4\not\in C(u_3)$,
then it follows from Corollary~\ref{auv2}(3) that $H$ contains a $(2, 3)_{(u_2, u_4)}$-path and $3\in C(u_4)\cap C(u_2)$.
Hereafter, we assume w.l.o.g. $3\in C(u_4)$.

Assume that $4\not\in C(u_3)$ or $T_3\setminus C(u_1)\ne \emptyset$.
One sees that if $4\not\in C(u_3)$, then $3\in B_2$,
and if $T_3\setminus C(u_1)\ne \emptyset$,
then $H$ contains a $(4, T_3\cap T_1)_{(u_1, u_4)}$-path and from Corollary~\ref{auv2}(5.2) and (3) that $3\in B_2$.
When $H$ contains no $(i, 3)_{(u_3, v_i)}$-path for each $i\in [5, 7]\cap C(u_3)$,
first, if $4\not\in C(u_3)$, then $u u_3\to T_3$,
if $T_3\setminus C(u_1)\ne \emptyset$, then $u u_3\to T_3\cap T_1$,
next, $(v u_3, u v)\overset{\divideontimes}\to (3, T_4)$,

Assume that $T_4\setminus C(u_2)\ne \emptyset$.
One sees that $H$ contains a $(3, T_4\cap T_2)_{(u_2, u_3)}$-path and from Corollary~\ref{auv2}(5.2) and (3) that $4\in B_1$.
If $H$ contains no $(4, i)_{(u_4, v_i)}$-path for each $i\in [5, 7]\cap C(u_4)$,
then $(v u_4, u u_4, u v)\overset{\divideontimes}\to (4, T_4\cap T_2, T_3)$.
Otherwise, we proceed with the following proposition:
\begin{proposition}
\label{53536464}
\begin{itemize}
\parskip=0pt
\item[{\rm (1)}]
	If $4\not\in C(u_3)$, or $T_3\setminus C(u_1)\ne \emptyset$,
    then $3\in B_2$ and $H$ contains a $(i, 3)_{(u_3, v_i)}$-path for some $i\in [5, 7]\cap C(u_3)$;
\item[{\rm (2)}]
	If $T_4\setminus C(u_2)\ne \emptyset$, then $4\in B_1$ and $H$ contains a $(4, i)_{(u_4, v_i)}$-path for some $i\in [5, 7]\cap C(u_4)$.
\end{itemize}
\end{proposition}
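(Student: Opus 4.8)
The plan is to prove the two parts by a common two-stage template, run once at $u_3$ and then symmetrically at $u_4$; recall that here $A_{uv}=\{1,2\}$, $U_{uv}=\{3,4\}$, $(vu_3,vu_4)_c=(1,2)$, and from the setup of Case~2.3 that $T_{uv}\subseteq C(u_3)\cup C(u_4)$, $T_3\subseteq C(u_4)\cap C(u_2)$ and $T_4\subseteq C(u_3)\cap C(u_1)$, with $T_3,T_4\ne\emptyset$. In each stage I would first pin down a bichromatic-membership fact ($3\in B_2$, respectively $4\in B_1$) by attempting to recolour one of $uu_1,uu_2$ so as to shrink $C(u)\cap C(v)$, and then force the required short bichromatic path incident to $u_3$ or $u_4$ by a contrapositive recolouring argument: its absence would complete an acyclic edge $(\Delta+5)$-colouring of $G$.

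For part~(1), assume $4\notin C(u_3)$ or $T_3\setminus C(u_1)\ne\emptyset$, and first obtain $3\in B_2$. If $4\notin C(u_3)$ this is immediate, since Corollary~\ref{auv2}(3) already supplies a $(2,3)_{(u_2,u_4)}$-path, which is exactly a witness for $3\in B_2$. If instead $T_3\setminus C(u_1)\ne\emptyset$, pick $j\in T_3\cap T_1$; were the recolouring $uu_1\to j$ to create no bichromatic cycle, it would reduce to $|C(u)\cap C(v)|\le 1$ and we would be done, so the attempt must fail, which (using $j\in T_3\subseteq C(u_4)$) forces a $(4,j)_{(u_1,u_4)}$-path, and then Corollary~\ref{auv2}(5.2) together with~(3) yields $3\in B_2$. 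With $3\in B_2$ secured, I would prove the path claim by contradiction: if $H$ has no $(i,3)_{(u_3,v_i)}$-path for any $i\in[5,7]\cap C(u_3)$, then recolouring $uu_3\to T_3$ (when $4\notin C(u_3)$) or $uu_3\to T_3\cap T_1$ (when $T_3\setminus C(u_1)\ne\emptyset$) introduces no bichromatic cycle, after which $(vu_3,uv)\overset{\divideontimes}\to(3,T_4)$ gives a valid colouring of $G$. Hence in the surviving case the asserted path exists.

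Part~(2) is the mirror argument at $u_4$. Assuming $T_4\setminus C(u_2)\ne\emptyset$, take $j\in T_4\cap T_2$; the reduction attempt $uu_2\to j$ must again fail (using $j\in T_4\subseteq C(u_3)$), forcing a $(3,j)_{(u_2,u_3)}$-path, whence Corollary~\ref{auv2}(5.2) and~(3) give $4\in B_1$. For the path, if $H$ contained no $(4,i)_{(u_4,v_i)}$-path for any $i\in[5,7]\cap C(u_4)$, then $(vu_4,uu_4,uv)\overset{\divideontimes}\to(4,T_4\cap T_2,T_3)$ would complete $G$, a contradiction, so the path is present.

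The main obstacle is the acyclicity verification of the finishing recolourings, and in particular of the steps $vu_3\to 3$ and $vu_4\to 4$: the hypothesised absence of the short paths $(i,3)_{(u_3,v_i)}$, respectively $(4,i)_{(u_4,v_i)}$, is precisely what rules out a new bichromatic cycle through $v$, and this must be combined with the uniqueness of maximal bichromatic paths from Lemma~\ref{lemma06} and with the placements $T_3\subseteq C(u_4)\cap C(u_2)$, $T_4\subseteq C(u_3)\cap C(u_1)$ to check that recolouring $uu_3$ (resp.\ $uu_4$) and then $uv$ along colours of $T_3,T_4$ also creates no cycle. Once these non-cycle checks are discharged, reading the two recolouring implications contrapositively delivers both parts of the statement.
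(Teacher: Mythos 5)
Your proposal is correct and follows essentially the same route as the paper: the membership claims $3\in B_2$ and $4\in B_1$ are obtained exactly as in the text (via Corollary~\ref{auv2}(3) when $4\notin C(u_3)$, and via the forced $(4,T_3\cap T_1)_{(u_1,u_4)}$-, resp. $(3,T_4\cap T_2)_{(u_2,u_3)}$-paths combined with Corollary~\ref{auv2}(5.2) and (3) otherwise), and the path claims are proved contrapositively by the same finishing recolourings $(vu_3,uv)\overset{\divideontimes}\to(3,T_4)$ and $(vu_4,uu_4,uv)\overset{\divideontimes}\to(4,T_4\cap T_2,T_3)$. The only cosmetic difference is that you spell out the recolouring attempts behind the forced paths, which the paper leaves to Corollary~\ref{auv2}(1.2).
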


Assume that $1\not\in C(u_4)$ and $H$ contains no $(1, 2)_{(u_2, u_3)}$-path.
If $T_3\setminus C(u_1)\ne \emptyset$, then $(u u_4, u u_1, u v)\overset{\divideontimes}\to (1, T_3\cap T_1, T_4)$.
If $H$ contains no $(3, j)_{(u_1, u_3)}$-path for some $j\in C_{34}\cap T_1$,
then $(u u_4, u u_1, u v)\to (1, C_{34}\cap T_1, T_4)$.
Assume $2\not\in C(u_2)$ and $H$ contains no $(1, 2)_{(u_1, u_4)}$-path.
If $T_4\setminus C(u_2)\ne \emptyset$, then $(u u_3, u u_2, u v)\overset{\divideontimes}\to (2, T_4\cap T_2, T_3)$.
If $H$ contains no $(4, j)_{(u_2, u_4)}$-path for some $j\in C_{34}\cap T_2$,
then $(u u_3, u u_2, u v)\to (2, C_{34}\cap T_2, T_3)$.

Otherwise, we proceed with the following proposition:
\begin{proposition}
\label{T3Cu1T4Cu2}
\begin{itemize}
\parskip=0pt
\item[{\rm (1)}]
	If $1\not\in C(u_4)$ and $H$ contains no $(1, 2)_{(u_2, u_3)}$-path, then $T_3\subseteq C(u_1)$,
    and if $C_{34}\ne \emptyset$,
    then either $C_{34}\subseteq C(u_1)$, or $3\in C(u_1)$, $H$ contains a $(3, C_{34}\cap T_1)_{(u_1, u_3)}$-path.
\item[{\rm (2)}]
	If $2\not\in C(u_3)$ and $H$ contains no $(1, 2)_{(u_1, u_4)}$-path, then $T_4\subseteq C(u_2)$,
    and if $C_{34}\ne\emptyset$,
    then either $C_{34}\subseteq C(u_2)$, or $4\in C(u_2)$, $H$ contains a $(4, C_{34}\cap T_2)_{(u_2, u_4)}$-path.
\end{itemize}
\end{proposition}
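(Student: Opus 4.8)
The plan is to treat Proposition~\ref{T3Cu1T4Cu2} exactly as the preceding ``recolour-or-structure'' propositions in this subsection: the two recolouring attempts that establish its conclusion have already been spelled out in the paragraph immediately before the statement, so the proof amounts to (i) verifying that each attempt, whenever its guard holds, yields a genuine acyclic edge $(\Delta+5)$-colouring of $G$, and (ii) reading off the displayed structural conclusion as the logical negation of all the guards. Since part (2) is obtained from part (1) by the symmetry $u_3\leftrightarrow u_4$, $1\leftrightarrow 2$, $3\leftrightarrow 4$, $u_1\leftrightarrow u_2$ (which swaps $T_3\leftrightarrow T_4$, $C(u_1)\leftrightarrow C(u_2)$ and the two $(1,2)$-paths), I would prove (1) in full and then invoke this symmetry for (2), checking only that the hypotheses and conclusion transform correctly.

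For (1) I would assume $1\notin C(u_4)$ and that $H$ has no $(1,2)_{(u_2,u_3)}$-path, and carry out the two attempts in order. First, if $T_3\setminus C(u_1)=T_3\cap T_1\neq\emptyset$, pick $\gamma\in T_3\cap T_1$ and $\delta\in T_4$ and recolour $(uu_4,uu_1,uv)\overset{\divideontimes}\to(1,\gamma,\delta)$. Properness is immediate: $1\notin C(u_4)$, $\gamma\notin C(u_1)$ since $\gamma\in T_1$, and $\delta\in T_4\subseteq C(u_1)$ forces $\delta\neq\gamma$, so the five edges at $u$ keep distinct colours. For acyclicity I would check, one recoloured edge at a time and invoking Lemma~\ref{lemma06}, that no bichromatic cycle is created; the only genuinely new risk is a $(1,2)$-cycle through $uu_4$ (now coloured $1$, with $vu_4$ coloured $2$ and $vu_3$ coloured $1$), which the hypothesis excludes, after which $uv\overset{\divideontimes}\to\delta$ is legitimised by \ref{prop3001} applied to the intermediate colouring. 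Hence, failing this attempt, $T_3\subseteq C(u_1)$. Second, if $C_{34}\neq\emptyset$ and some $j\in C_{34}\cap T_1$ admits no $(3,j)_{(u_1,u_3)}$-path, recolour $(uu_4,uu_1,uv)\overset{\divideontimes}\to(1,j,T_4)$; here $j\in C_{34}\subseteq C(u_3)\cap C(u_4)$ while $j\in T_1$ gives $j\notin C(u_1)$, and the absence of the $(3,j)_{(u_1,u_3)}$-path is exactly what prevents $uu_1\to j$ from closing a $(3,j)$-cycle through the edge $uu_3$ (coloured $3$). If this fails for every such $j$, then either $C_{34}\cap T_1=\emptyset$, i.e.\ $C_{34}\subseteq C(u_1)$, or $H$ contains a $(3,j)_{(u_1,u_3)}$-path for each $j\in C_{34}\cap T_1$; since the only incidence of $j$ at $u_1$ is the excluded edge $uu_1$, such a path must start at $u_1$ with colour $3$, forcing $3\in C(u_1)$. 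This is precisely the stated alternative.

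The main obstacle will be the acyclicity bookkeeping for the two simultaneous three-edge recolourings rather than any counting: after changing $uu_4$, $uu_1$ and $uv$ at once, one must be sure that no pair of colours beyond those explicitly guarded can form a bichromatic cycle. The delicate point is the interaction of $uu_4\to1$ with the two incident edges $vu_3$ and $vu_4$ already coloured $1$ and $2$; I would handle this using Corollary~\ref{auv2d(u)5}(2.1) together with Lemma~\ref{lemma06}, which pin down exactly where a putative $(1,2)$-path would have to run and thereby reduce the whole check to the single hypothesis ``no $(1,2)_{(u_2,u_3)}$-path.'' Everything else is routine verification using the standing Case~2.3 facts $T_3\subseteq C(u_2)$, $T_4\subseteq C(u_1)$ and $3\in C(u_4)$, so no idea beyond the negation-of-guards template is required.
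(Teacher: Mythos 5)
Your proposal coincides with the paper's own argument: the proof of this proposition is exactly the two recolourings $(uu_4,uu_1,uv)\overset{\divideontimes}\to(1,T_3\cap T_1,T_4)$ and $(uu_4,uu_1,uv)\to(1,C_{34}\cap T_1,T_4)$ displayed immediately before the statement (with their mirror images for part (2)), and the proposition is the negation of their guards, read off just as you do. One caveat: the acyclicity check involves one more colour pair than the $(1,2)$-cycle you single out, namely a $(1,\gamma)$-cycle through the two recoloured edges $uu_4$ and $uu_1$, but this is killed at once because $uu_1$ was the unique $1$-coloured edge at $u_1$, so after the recolouring $u_1$ is incident to no edge coloured $1$; the edge-by-edge verification you commit to would catch this.
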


Assume that $2\not\in C(u_3)$ and $H$ contains no $(1, 2)_{(u_1, u_4)}$-path.
If $C(u_4)\cap [5, 7] = \{b_2\}$, $5\not\in C(u_2)$ and $H$ contains no $(1, 5)_{(u_1, u_2)}$-path,
one sees that $H$ contains a $(b_2, T_4)_{(u_4, v_{b_2})}$-path,
then $(u u_2, u u_3, u u_4, u v)\overset{\divideontimes}\to (5, 2, T_4, T_3)$.
If $3\not\in C(u_2)$ and $H$ contains no $(3, i)_{(u_2, u_i)}$-path for each $i\in \{1, 4\}$,
then $(u u_2, u u_3, u v)\overset{\divideontimes}\to (3, 2, T_3)$.
Otherwise, we proceed with the following proposition:
\begin{proposition}
\label{2no1212}
If $2\not\in C(u_3)$ and $H$ contains no $(1, 2)_{(u_1, u_4)}$-path, then
\begin{itemize}
\parskip=0pt
\item[{\rm (1)}]
	 if $C(u_4)\cap [5, 7] = \{b_2\}$, then $5\in C(u_2)$ or $H$ contains a $(1, 5)_{(u_1, u_2)}$-path;
\item[{\rm (2)}]
	 $3\in C(u_2)$ or $H$ contains a $(3, i)_{(u_2, u_i)}$-path for some $i\in \{1, 4\}$.
\end{itemize}
\end{proposition}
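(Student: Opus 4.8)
The plan is to prove both parts by the same recoloring scheme used throughout this subsection: assuming the standing hypotheses ($2\notin C(u_3)$ and no $(1,2)_{(u_1,u_4)}$-path) together with the negation of the desired conclusion, I would exhibit an explicit recoloring of the edges incident to $u$ that extends to an acyclic edge $(\Delta+5)$-coloring of $G$, i.e. a $\overset{\divideontimes}\to$ assignment; by the inductive setup this means we are done, so the stated conclusion must hold in the surviving case. Throughout I would use the facts already fixed in Case~2.3: $T_{uv}\subseteq C(u_3)\cup C(u_4)$, $T_3\subseteq B_2\subseteq C(u_2)$, $T_4\subseteq B_1\subseteq C(u_1)$, $3\in C(u_4)$, and $c(uu_i)=i$ with $(vu_3,vu_4)_c=(1,2)$ so that $u_3=v_{i_1}$ and $u_4=v_{i_2}$ sit on the two triangles through $v$.

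For part (1), suppose $C(u_4)\cap[5,7]=\{b_2\}$ and, in the surviving case, assume $5\notin C(u_2)$ and $H$ contains no $(1,5)_{(u_1,u_2)}$-path. Because $b_2$ is the only color of $[5,7]$ on $u_4$, Lemma~\ref{auvge2}(1.2) supplies a $(b_2,j)_{(u_4,v_{b_2})}$-path for some $j\in T_4$, and we already have $T_4\subseteq C(u_1)$, $T_3\subseteq C(u_2)$. I would then perform $(uu_2,uu_3,uu_4,uv)\to(5,2,T_4,T_3)$, taking a color of $T_4$ on $uu_4$ and a color of $T_3$ on $uv$. Propriety is immediate from $5\notin C(u_2)$ and $2\notin C(u_3)$, and acyclicity is verified edge by edge with Lemma~\ref{lemma06}: the absence of a $(1,5)_{(u_1,u_2)}$-path blocks any bichromatic cycle on the recolored $uu_2$; the single occurrence $b_2$ together with the $(b_2,T_4)$-path controls $uu_4$; and the standing absence of a $(1,2)_{(u_1,u_4)}$-path rules out a cycle between the color $2$ newly placed on $uu_3$ and the colors around $u_4$. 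Hence this is a $\overset{\divideontimes}\to$ coloring, so we are done unless $5\in C(u_2)$ or a $(1,5)_{(u_1,u_2)}$-path exists, which is exactly (1).

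For part (2), assume $3\notin C(u_2)$ and $H$ contains no $(3,i)_{(u_2,u_i)}$-path for each $i\in\{1,4\}$. I would recolor $(uu_2,uu_3,uv)\to(3,2,T_3)$. Again propriety follows from $3\notin C(u_2)$ and $2\notin C(u_3)$; for acyclicity, the absence of $(3,1)_{(u_2,u_1)}$- and $(3,4)_{(u_2,u_4)}$-paths prevents a new $3$-bichromatic cycle through $uu_2$, while the hypothesis of no $(1,2)_{(u_1,u_4)}$-path prevents a bichromatic cycle through the color $2$ now carried by $uu_3$ (the fact $3\in C(u_4)$ keeps the relevant maximal path from closing). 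This yields a $\overset{\divideontimes}\to$ coloring, giving (2).

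The main obstacle I anticipate is the acyclicity bookkeeping rather than propriety: since $u_3=v_{i_1}$ and $u_4=v_{i_2}$ lie on the two triangles through $v$, the color classes around $u$, $v$, $u_3$, $u_4$ interlock, so after the simultaneous moves one must certify non-closure of every two-color subgraph touched by a recolored edge. My strategy is to discharge each such pair by Lemma~\ref{lemma06} (uniqueness of the maximal bichromatic path from a vertex avoiding one color), anchored on the three standing facts $T_3\subseteq C(u_2)$, $T_4\subseteq C(u_1)$, $3\in C(u_4)$ fixed just before the proposition, thereby reducing every potential cycle to one of the two paths explicitly excluded by the proposition's hypotheses.
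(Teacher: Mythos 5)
Your recolorings $(uu_2,uu_3,uu_4,uv)\to(5,2,T_4,T_3)$ for (1) and $(uu_2,uu_3,uv)\to(3,2,T_3)$ for (2) are exactly the moves the paper makes, justified by the same standing facts ($T_3\subseteq C(u_2)$, $T_4\subseteq C(u_1)$, $3\in C(u_4)$, and the $(b_2,T_4)_{(u_4,v_{b_2})}$-path from Lemma~\ref{auvge2}(1.2)) and the same two excluded paths from the hypotheses. The proposal is correct and essentially identical to the paper's proof.
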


Hence, by symmetry, we can have the following two subcase.

(2.3.1.) $4\in C(u_3)$ and $3\in C(u_4)$.
One sees from $w_3 + w_4\le 9$ that $\min\{w_3, w_4\} = 4$.
Assume w.l.o.g. $W_3 = \{1, 3, 4, 5\}$.
Then it follows from Lemma~\ref{auvge2}(1.2) that $H$ contains a $(5, T_3)_{(u_3, v_5)}$-path.

It follows from \ref{5134} (1) that $5/1/4/3\in C(u_2)$,
and from \ref{5134} (2) that $b_2/2/3/4\in C(u_1)$.
We distinguish the following two cases on whether $1\in C(u_4)$ or not.

(2.3.1.1.) $1\not\in W_4$.
It follows from \ref{T3Cu1T4Cu2} (1) that $T_3\subseteq C(u_1)$ and from \ref{T3Cu1T4Cu2} (2) that $T_4\subseteq C(u_2)$.
One sees that $C_{34}\subseteq C(u_1)\cup C(u_2)$.
Then $d(u_1) + d(u_2)\ge t_{uv} + t_3 + t_4 + |\{1, b_2/2/3/4\}| + |\{2, 5/1/3/4\}| = \Delta + 2 + t_3 + t_4$.
It follows that if $t_3 + t_4 = \Delta - 2$, then $d(u_1) = d(u_2) = \Delta = 7$ and then $d(u_3) + d(u_4)\le \Delta + 6$, i.e., $w_3 + w_4 = 8$, and $W_4 = \{4, 2, b_2, 3\}$,
and if $t_3 + t_4 = \Delta - 3$, then $c_{34} = 1$, $\Delta = 7$, $w_3 + w_4 = 8$,
$W_4 = \{4, 2, b_2, 3\}$, and $d(u_1) + d(u_2)\le \Delta + 6$.

It follows from \ref{2no1212} (1) that $5\in C(u_1)\cup C(u_2)$,
and from \ref{2no1212} (2) that $3\in C(u_1)\cup C(u_2)$, or $4\in C(u_2)$.
Then $d(u_1) + d(u_2)\ge t_{u v} + t_3 + t_4 + |\{1\}| + |\{2\}| + |\{5, 3/4\}| = \Delta + 2 + t_3 + t_4$.
It follows that $d(u_1) + d(u_2) = \Delta + 2 + t_3 + t_4$, $w_1 + w_2 = 4$,
i.e., $W_1\cup W_2 = \{1, 2, 5, 3/4\}$.
However, by \ref{2no1212}(1--2), $5, 3/4\in C(u_2)$.
Recall that $b_2/2/3/4\in C(u_1)$, a contradiction.

(2.3.1.2.) $W_4 = \{2, 4, b_2, 3, 1\}$.
Then $H$ contains a $(b_2, T_4)_{(u_4, v_{b_2})}$-path and $d(u_3) = d(u_4) = \Delta = 7$, $t_3 = 2$, $t_4 = 3$,
$d(u_1) + d(u_2)\le \Delta + 6$.

When $H$ contains no $(1, 2)_{(u_1, u_4)}$-path,
it follows from \ref{T3Cu1T4Cu2}(2) that $T_4\subseteq C(u_2)$,
and from \ref{2no1212}(1--2) that $C(u_2) = \{1, 2\}\cup T_{u v}$ and $C(u_1) = \{1, 3, 5\}\cup T_4$.
Then $(u u_1, u v)\overset{\divideontimes}\to (T_3, T_4)$\footnote{Or since $T_3\setminus C(u_1)\ne \emptyset$,
it follows from Corollary~\ref{auv2}(1.2) that $4\in C(u_1)$ and $d(u_1) = d(u_2) = \Delta$, a contradiction. }.

In the other case, $H$ contains a $(1, 2)_{(u_1, u_4)}$-path.
It follows that $2\in C(u_1)$.
\begin{itemize}
\parskip=0pt
\item
    $T_3\subseteq C(u_1)$, i.e., $C(u_1) = \{1, 2\}\cup T_{u v}$ and $d(u_1) = \Delta$.
    One sees from \ref{5134}(2) that $b_2\in C(u_2)$.
    Since $d(u_2)\le 6$, $T_4\setminus C(u_2) \ne \emptyset$.
    It follows that $3\in C(u_2)$ and $H$ contains a $(3, T_4\cap T_2)_{(u_2, u_3)}$-path.
    Then $(u u_3, u v)\overset{\divideontimes}\to (T_4\cap T_2, 4)$.
\item
    $T_3\setminus C(u_1)\ne \emptyset$.
    It follows that $4\in C(u_1)$ and $H$ contains a $(4, T_3\cap T_1)_{(u_1, u_4)}$-path,
    and from \ref{53536464}(1) that $3\in B_2$, $H$ contains a $(5, 3)_{(u_3, v_5)}$-path and $5\in C(v_5)$.
    If $b_2 = 5$, one sees that $C(v_5) = \{3, 5\}\cup T_{uv}$,
    then $(v v_5, v u_4, u v)\to (1, T_3, T_4)$.
    Otherwise, $b_2\ne 5$ and assume w.l.o.g. $b_2 = 6$.
    It follows from \ref{5134} (1) that $5/1\in C(u_2)$ and then $T_4\setminus C(u_2)\ne \emptyset$.
    Further, it follows from \ref{53536464}(2) that $H$ contains a $(4, 6)_{(u_4, v_6)}$-path,
    and then from \ref{5134} (2) that $6\in C(u_1)$ or $H$ contains a $(2, 6)_{(u_1, u_2)}$-path, i.e., $6\in C(u_1)\cup C(u_2)$.
    Since $T_3\setminus T_0\ne \emptyset$, it follows from Corollary~\ref{auv2}(5.1) that $7\in S_u$ and then $7\in C(u_1)\cup C(u_2)$.
    Then $d(u_1) + d(u_2)\ge t_{uv} + |\{1, 2, 4\}| + |\{2, 3\}| + |\{5, 6, 7\}| = 13$.
    It follows that $1, 4\not\in C(u_2)$ and $5\not\in C(u_1)$.
    Then $(u u_1, u u_2, u v)\overset{\divideontimes}\to (5, 1, T_3)$.
\end{itemize}

(2.3.2.) $4\not\in C(u_3)$ and $3\in B_2\subseteq C(u_4)\cap C(u_2)$.
Since $4\not\in C(u_3)$, it follows from \ref{53536464}(1) that $H$ contains a $(3, i)_{(u_3, v_i)}$-path for some $i\in [5, 7]\cap C(u_3)$.
If $H$ contains no $(3, j)_{(u_3, u_4)}$-path for some $j\in T_4$,
then $(u u_4, u v)\overset{\divideontimes}\to (j, 4)$.
Otherwise, $H$ contains a $(3, T_4)_{(u_3, u_4)}$-path.
It follows from Corollary~\ref{auv2}(1.2) that $T_4\subseteq C(u_2)$ and then $\{2, 3\}\cup T_3\cup T_4\subseteq C(u_2)$.
Since $T_4\subseteq C(u_1)\cap C(u_2)\cap C(u_3)$ and $C_{34}\subseteq (C(u_1)\cup C(u_2))\cap C(u_3)\cap C(u_4)$, $t_0\ge t_4 + c_{34}\ge t_4\ge 2$.

Assume that $W_3 = \{1, 3, 5\}$ and $W_4 = [2, 4]\cup \{b_2\}$.
One sees that $H$ contains a $(5, T_3)_{(u_3, v_5)}$-path and a $(b_2, T_4)_{(u_4, v_{b_2})}$-path.
If there exists an $i\in [5, 7]\setminus (C(u_2)\cup C(u_4))$, then $(u u_4, u v)\overset{\divideontimes}\to (i, 4)$.
If there exists a $j\in [6, 7]\setminus C(u_1)$, then $(u u_3, u u_4, u v)\overset{\divideontimes}\to (i, T_4, 4)$.
Otherwise, $[5, 7]\setminus C(u_4)\subseteq C(u_2)$ and $6, 7\in C(u_1)$.
If $3\not\in C(u_1)$ and $H$ contains no $(3, 4)_{(u_1, u_4)}$-path,
then $(u u_1, u u_3, u v)\overset{\divideontimes}\to (3, T_3, T_4)$.
Otherwise, $3\in C(u_1)$ or $H$ contains a $(3, 4)_{(u_1, u_4)}$-path.
It follows that $6, 7, 1, 3/4\in C(u_1)$.
One sees from \ref{2no1212}(1) that $1/5\in C(u_2)$.
If $b_2 = 5$, then $6, 7, 5/1, 3\in C(u_2)$,
and $\sum_{x\in N(u)\setminus \{v\}}d(x)\ge \sum_{i\in [1, 4]}w_i + 2t_{u v} + t_0\ge |\{1, 6, 7, 3/4\}| + |\{2, 6, 7, 3, 1/5\}| + |\{1, 3, 5\}| + |[2, 5]| + 2t_{u v} + t_0 = 16 + 2t_{u v} + t_0
\ge 2\Delta + 12 + 2 = 2\Delta + 14$, a contradiction.
Otherwise, $b_2 = 6$.
Then $5, 7, 3\in C(u_2)$,
and $\sum_{x\in N(u)\setminus \{v\}}d(x)\ge \sum_{i\in [1, 4]}w_i + 2t_{u v} + t_0\ge |\{1, 6, 7, 3/4\}| + |\{2, 5, 7, 3\}| + |\{1, 3, 5\}| + |\{2, 3, 4, 6\}| + 2t_{u v} + t_0 = 16 + 2t_{u v} + t_0\ge 2\Delta + 11 + 2 = 2\Delta + 13$.
It follows that $\sum_{x\in N(u)\setminus \{v\}}d(x) = 2\Delta + 13$, $t_0 = t_4 = 2$, and then $\Delta = 7$.
One sees from $t_3\ge 1$ and $C(u_2) = \{2, 3, 5, 7\}\cup T_3\cup T_4$ that $t_3 = 1$.
Thus, $c_{3 4} = 2$ and $t_0\ge t_4 + c_{3 4} = 2 + 2 = 4$, a contradiction.

In the other case, $w_3\ge 4$ or $w_4\ge 5$.
One sees clearly that $w_3 + w_4\ge 8$.
It follows that $c_{3 4}\le 1$ and $t_3 + t_4\ge \Delta - 3$.
we can have the following two subcase.

(2.3.2.1.) $t_3 + t_4 = \Delta - 2$.
Then $C(u_2) = [2, 3]\cup T_3\cup T_4 = [2, 3]\cup T_{u v}$.
It follows from \ref{5134}(1) that $H$ contains a $(5, 3)_{(u_2, u_3)}$-path,
and from \ref{53536464}(1) that $H$ contains a $(i, 3)_{(u_3, u_i)}$-path for some $i\in [5, 7]\cap C(u_3)$.
By Lemma~\ref{lemma06}, assume w.l.o.g. $H$ contains a $(3, 6)_{(u_3, v_6)}$-path.
When $1\not\in C(u_4)$,
if there exists an $i\in [5, 6]\setminus C(u_4)$, then $(u u_4, u v)\overset{\divideontimes}\to (i, 4)$.
Otherwise, $W_4 = [2, 6]$ and $W_3 = \{1, 3, 5, 6\}$.
Then $(u u_4, u v)\overset{\divideontimes}\to (7, 4)$.
In the other case, $1\in C(u_4)$.
Then $W_4 = [1, 4]\cup \{b_2\}$, $W_3 = \{1, 3, 5, 6\}$ and $d(u_3) = d(u_4) = \Delta = 7$.
It follows that $d(u_1)\le 6$.
If there exists an $i\in [5, 7]\setminus (C(u_4)\cup C(u_1))$, then $(u u_4, u v)\overset{\divideontimes}\to (i, 4)$.
Otherwise, $\{5, 6, 7\} = \{a_1, a_2, a_3\}\subseteq C(u_1)\cup C(u_4)$.
It follows that $C(u_1) = T_4\cup \{1, a_1, a_2\}$.
Then $(u u_1, u v)\overset{\divideontimes}\to (T_3, T_4)$.

(2.3.2.2.) $t_3 + t_4 = \Delta - 3$.
It follows that $c_{3 4} = 1$, $w_3 + w_4 = 8$ and $d(u_3) = d(u_4) = \Delta = 7$,
$\min\{d(u_1), d(u_2)\}\le 6$.
\begin{itemize}
\parskip=0pt
\item
    $W_3 = \{3, 1, 5, a\}$, $W_4 = \{4, 2, b_2, 3\}$.
    When $b_2 = 5$, if there exists an $i\in [6, 7]\setminus (C(u_2)\cup C(u_3))$,
    then $(u u_4, u v)\overset{\divideontimes}\to (i, 4)$.
    Otherwise, $6, 7\in C(u_2)\cup C(u_3)$ and then assume w.l.o.g. $W_2 = \{2, 3, 6\}$ and $a = 7$.
    It follows from \ref{5134}(1) that $H$ contains a $(5, 3)_{(u_2, u_3)}$-path and from \ref{53536464} that $H$ contains a $(7, 3)_{(u_3, v_7)}$-path.
    Then $(u u_4, u v)\overset{\divideontimes}\to (7, 4)$.

    \quad In the other case, assume w.l.o.g. $b_2 = 6$.
    If $C(u_2) = \{2, 3, 7\}\cup T_3\cup T_4$,
    one sees from \ref{5134}(1) that $H$ contains a $(5, 3)_{(u_2, u_3)}$-path,
    then $(u u_4, u v)\overset{\divideontimes}\to (5, 4)$.
    Otherwise, $7\not\in C(u_2)$.
    If $H$ contains no $(3, 7)_{(u_3, u_4)}$-path, then $(u u_4, u v)\to (7, 4)$;
    otherwise, $H$ contains a $(3, 7)_{(u_3, u_4)}$-path.
    It follows from \ref{53536464}(1) that $H$ contains a $(5, 3)_{(u_3, v_5)}$-path,
    and from \ref{5134}(1) that $5/1\in C(u_2)$.
    Thus $C(u_2) = \{2, 3, 5/1\}\cup T_3\cup T_4$ and $(u u_2, u u_3, u v)\overset{\divideontimes}\to (12, 2, T_3)$.
\item
    $W_3 = \{3, 1, 5\}$, $W_4 = \{4, 2, b_2, b_3, 3\}$.
    It follows from \ref{53536464} that $H$ contains a $(5, 3)_{(u_3, v_5)}$-path and from \ref{5134}(1) that $5\in C(u_2)$ or $H$ contains a $(5, i)_{(u_2, u_i)}$-path for some $i\in \{1, 4\}$.
    Hence, $C(u_2) = \{2, 3, 1/5\}\cup T_3\cup T_4$ and $T_4\cup T_{3 4}\subseteq C(u_1)$.
    One sees that $t_4 = 3$ and $c_{3 4} = 1$ and from \ref{5134} (2) that $b_2/2/3/4\in C(u_1)$.
    It follows that $T_3\setminus C(u_1)\ne \emptyset$ and then $C(u_1) = \{1, 4\}\cup T_4\cup C_{3 4}$.
    Further, it follows from Corollary~\ref{auv2}(5.1) that $6, 7\in S_u$.
    Hence, $W_4 = \{4, 2, 6, 7, 3\}$ and $W_2 = \{2, 3, 5\}$.
    Then $(u u_2, u u_3, u v)\overset{\divideontimes}\to (12, 2, T_3)$.
\end{itemize}

{Case 2.4.} $\{i_1, i_2\} = \{1, 3\}$ assume w.l.o.g. $c(v u_3) = 5$.

If $5\not\in C(u_1)$ and $H$ contains no $(5, i)_{(u_1, u_i)}$-path for each $i\in [2, 4]$,
$uu_1\to 5$ reduces the proof to the Case 2.2.
Hence, we proceed with the following proposition, or otherwise we are done:
\begin{proposition}
\label{315234}
$5/2/3/4\in C(u_1)$, $5\in S_u\setminus C(u_3)$,
and if $5\not\in C(u_1)$, then $H$ contains a $(5, i)_{(u_1, u_i)}$-path for some $i\in [2, 4]$.
\end{proposition}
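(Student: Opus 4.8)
The engine of the argument is the single recolouring $uu_1\to 5$, which is legitimate precisely when $5\notin C(u_1)$. The plan is to first check that this recolouring, when it creates no bichromatic cycle, drops us into an already-settled situation, and then to read off all three assertions from the obstruction to its success. First I would verify that $uu_1\to 5$ is a proper edge colouring of $H$ whenever $5\notin C(u_1)$: at $u$ the palette changes from $\{1,2,3,4\}$ to $\{2,3,4,5\}$ and at $u_1$ the colour $5$ is fresh. After the swap $C(u)\cap C(v)=\{2,3,4,5\}\cap\{1,3,5,6,7\}=\{3,5\}$, still of size two, and the two edges at the common neighbour $u_3$ — namely $uu_3$ (colour $3$) and $vu_3$ (colour $5$, since here $c(vu_3)=5$) — are both coloured by the two shared colours. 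Up to the relabelling $5\leftrightarrow 1$ this is exactly the configuration of Case 2.2 ($\{i_1,i_2\}=\{1,3\}$ with $c(vu_3)=i_1$), already disposed of. Hence if the recolouring introduces no bichromatic cycle we are done, and any obstruction must be a cycle through the new edge $uu_1$ (colour $5$); such a cycle closes at $u$ along some $uu_i$ with $i\in\{2,3,4\}$ and is therefore carried by a $(5,i)_{(u_1,u_i)}$-path. This is precisely the third assertion.

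Part one then follows by inspecting the ends of that path. If $5\in C(u_1)$ there is nothing to prove; otherwise the path leaves $u_1$ along an edge coloured $i$ (it cannot be $5$), so $i\in C(u_1)$ with $i\in\{2,3,4\}$, giving $5/2/3/4\in C(u_1)$. For the second assertion I would again split on whether $5\in C(u_1)$: if so, $5\in C(u_1)\setminus\{1\}$ contributes $5$ to $S_u$ and $u_1\ne u_3$, so $5\in S_u\setminus C(u_3)$. If $5\notin C(u_1)$, the obstructing cycle alternates so that the path reaches $u_i$ along a $5$-coloured edge, whence $5\in C(u_i)$; when $i\in\{2,4\}$ this again gives $5\in S_u\setminus C(u_3)$ immediately.

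The one delicate point, which I expect to be the main obstacle, is the residual case $i=3$ of the second assertion, where the only obstruction is a $(5,3)_{(u_1,u_3)}$-path and $5$ appears at none of $u_1,u_2,u_4$. Here I would exploit that $5\in C(u_3)$ is realised by the single edge $vu_3$, so the blocking path necessarily enters $u_3$ through $vu_3$ and in fact runs through $v$ (its penultimate edge being $vv_3$, coloured $3$, as $c(vv_3)=3$). I would then perform a secondary recolouring of $vu_3$ to a colour $\alpha\in C\setminus(C(v)\cup C(u_3))$, chosen to avoid creating a new bichromatic cycle; after $uu_1\to 5$ and $vu_3\to\alpha$ the palette at $v$ loses $5$ and $C(u)\cap C(v)=\{2,3,4,5\}\cap\{1,3,\alpha,6,7\}=\{3\}$, so $|C(u)\cap C(v)|\le 1$ and we are done (Lemma~\ref{auv1coloring}).

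The hard part will be justifying the secondary recolouring: establishing that a safe $\alpha$ exists requires bounding $|C(v)\cup C(u_3)|$ against $\Delta+5$ and ruling out the newly threatened $(\alpha,\cdot)$-cycles through $vu_3$ with the aid of Lemma~\ref{lemma06}. This is the step demanding the most care, and it is exactly what keeps the statement phrased as ``either we are done, or the three conclusions hold''; I would handle it last, after the three assertions have been reduced to this single sub-case.
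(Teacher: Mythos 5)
Your engine is the same as the paper's: the paper's entire justification for Proposition~\ref{315234} is the sentence immediately preceding it, namely that if $5\notin C(u_1)$ and there is no $(5,i)_{(u_1,u_i)}$-path for any $i\in[2,4]$, then $uu_1\to 5$ is safe and reduces the proof to Case~2.2, so the proposition is simply the negation of that success condition. Your derivation of the first and third clauses from this negation, and your identification of the reduction target (after the swap both shared colours $3$ and $5$ sit on the two edges at $u_3$), agree with what the paper intends.

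Where you part company with the paper is the clause $5\in S_u\setminus C(u_3)$, and you have correctly located a genuine defect in the literal statement: when the only obstruction is a $(5,3)_{(u_1,u_3)}$-path (which must enter $u_3$ along $vu_3$, hence run through $v$ and $v_3$), nothing places $5$ in $C(u_1)\cup C(u_2)\cup C(u_4)$. But the paper does not close this gap either; it silently works with the weaker disjunction ``$5\in S_u\setminus C(u_3)$, or else $3\in C(u_1)$ and $H$ contains a $(3,5)_{(u_1,u_3)}$-path'' --- see the paragraph establishing \ref{135notSu} and the proof of Lemma~\ref{135766}(1), both of which open with ``assume $5\notin S_u\setminus C(u_3)$'' and read Proposition~\ref{315234} as yielding exactly your residual case; the unconditional claim is only recovered later, under extra hypotheses, inside Lemma~\ref{135766}(1). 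So the flat second clause is an overstatement in the proposition's wording, not something you should try to prove at this point. Your proposed patch is also not salvageable as sketched: choosing $\alpha\notin C(u)\cup C(v)\cup C(u_3)$ does kill the $(\alpha,3)$-cycles (such a cycle would need an $\alpha$-edge at $u$), but for the other colours $\beta\in C(v)\cap C(u_3)$, e.g.\ $\beta\in\{1,6,7\}$, an $(\alpha,\beta)$-path joining $v$ and $u_3$ cannot be excluded by counting or by Lemma~\ref{lemma06}, and in the extremal case $C(u_3)=\{3,5\}\cup[8,\Delta+5]$ with $d(u_3)=\Delta$ no admissible $\alpha$ exists at all. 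The right move is to record and use the disjunctive form rather than to force the second clause.
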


Assume that $H$ contains a $(\rho, 8)_{(u_3, v_\rho)}$-path for some $\rho\in [6, 7]$ and $8\in T_3$.
If $\rho\not\in C(u_1)$ and $H$ contains no $(\rho, i)_{(u_1, u_i)}$-path for each $i\in [2, 4]$,
then $(u u_1, u v)\overset{\divideontimes}\to (\rho, 8)$.
Hence, we proceed with the following proposition, or otherwise we are done:
\begin{proposition}
\label{31rho8234}
If $H$ contains a $(\rho, 8)_{(u_3, v_\rho)}$-path for some $\rho\in [6, 7]$ and $8\in T_3$,
then $\rho\in C(u_1)$ or $H$ contains a $(\rho, i)_{(u_1, u_i)}$-path for some $i\in [2, 4]$.
\end{proposition}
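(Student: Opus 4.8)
The plan is to prove the proposition exactly as the paragraph preceding it announces: by assuming its negation and verifying that the recoloring $(uu_1, uv)\overset{\divideontimes}\to(\rho, 8)$ produces an acyclic edge $(\Delta+5)$-coloring of $G$, so that ``we are done'' and the stated alternative must hold. Recall the ambient setting of Case~2.4: here $d(u)=5$ with $C(u)=[1,4]$, $a_{uv}=2$ with $A_{uv}=\{1,3\}$, $d(v)=6$ with $C(v)=\{1,3\}\cup[5,7]$, $c(v u_3)=5$, and we are given $\rho\in[6,7]$, $8\in T_3$, together with a $(\rho,8)_{(u_3,v_\rho)}$-path $Q$ in $H$. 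So I would assume for contradiction that $\rho\notin C(u_1)$ and that $H$ contains no $(\rho,i)_{(u_1,u_i)}$-path for any $i\in[2,4]$, and then show the two-step recoloring is legitimate.

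First I would justify recoloring $u u_1\to\rho$. It is proper at $u_1$ since $\rho\notin C(u_1)$, and any newly created bichromatic cycle must pass through $u u_1$ and hence be a $(\rho,j)$-cycle with $j\in C(u)\setminus\{1\}=\{2,3,4\}$; such a cycle requires a $(\rho,j)_{(u_1,u_j)}$-path, which is excluded by the standing assumption. Next I would color $u v\to 8$: the color $8$ is admissible because $8\in T_{uv}$ gives $8\notin C(u)\cup C(v)$ originally, $8\neq\rho$ (as $8>7\ge\rho$), and the recoloring of $u u_1$ does not introduce $8$ at $u$. After both changes $C(u)\cap C(v)=\{3,\rho\}$, so only $(8,3)$- and $(8,\rho)$-cycles through $uv$ need to be ruled out. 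The $(8,3)$-cycle is the easy case: a $(8,3)_{(u,v)}$-path must leave $u$ along the unique $3$-edge $u u_3$ (the only $8$-edge at $u$ is $uv$), but $8\in T_3$ means $8\notin C(u_3)$, so the path terminates at $u_3$ and no such cycle exists.

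The hard part, and the main obstacle, is excluding the $(8,\rho)$-cycle; this is where Lemma~\ref{lemma06} must be invoked with care. A $(8,\rho)_{(u,v)}$-path would necessarily begin $u$--$u_1$ along the recolored $\rho$-edge $uu_1$ and end $v_\rho$--$v$ along the $\rho$-edge $vv_\rho$, so it reduces to a $(\rho,8)$-path from $u_1$ to $v_\rho$ whose edge at $v_\rho$ is the $8$-edge. To rule this out I would extend $Q$ to its maximal $(\rho,8)$-path $Q^+$: at the $u_3$-end it cannot extend because $8\notin C(u_3)$, while at $v_\rho$ it arrives on the $8$-edge and extends through the $\rho$-edge $v v_\rho$ to $v$, where it stops since $8\notin C(v)$. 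Thus $Q^+$ is the maximal $(\rho,8)$-path through $Q$, with endpoints $u_3$ and $v$, and neither $u$ nor $u_1$ lies on it, since in $H$ these vertices carry no $\rho$-edge (using $\rho\notin C(u_1)$) and no $8$-edge. By the uniqueness of $8$-edges at a vertex and Lemma~\ref{lemma06}, any $(\rho,8)$-path reaching $v_\rho$ on its $8$-edge must be a sub-path of $Q^+$, which would force $u_1\in V(Q^+)$; but an interior vertex of a $(\rho,8)$-path carries a $\rho$-edge, which $u_1$ lacks, so $u_1$ could only be an endpoint of $Q^+$, contradicting that its endpoints are $u_3$ and $v$. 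Hence no $(8,\rho)_{(u,v)}$-path exists, the coloring obtained is acyclic, and $(uu_1, uv)\overset{\divideontimes}\to(\rho, 8)$ is valid. This contradiction establishes that $\rho\in C(u_1)$ or $H$ contains a $(\rho,i)_{(u_1,u_i)}$-path for some $i\in[2,4]$, proving the proposition. I expect the bookkeeping of which $\rho$/$8$-edges are present at $u,u_1,v,v_\rho$ after the recoloring, and the precise orientation chosen when applying Lemma~\ref{lemma06}, to be the only delicate points.
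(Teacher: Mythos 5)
Your verification is correct and follows exactly the route the paper intends: the paper's entire ``proof'' of \ref{31rho8234} is the one-line assertion that $(uu_1,uv)\overset{\divideontimes}\to(\rho,8)$ succeeds under the negation, and you have supplied precisely the missing checks (properness, the $(\rho,j)$-cycles through $uu_1$ for $j\in\{2,3,4\}$, the $(8,3)$-cycle killed by $8\notin C(u_3)$, and the $(8,\rho)$-cycle killed by locating the maximal $(\rho,8)$-path with endpoints $u_3$ and $v$). One small inaccuracy: since $8\in T_3$ forces $8\in B_1\subseteq C(u_1)$, the vertex $u_1$ \emph{does} carry an $8$-edge, so your parenthetical claim that $u_1$ has ``no $8$-edge'' is false — but this does not matter, because your subsequent argument (an interior vertex of $Q^+$ needs a $\rho$-edge, which $u_1$ lacks, and the endpoints of $Q^+$ are $u_3$ and $v$) is the correct justification that $u_1\notin V(Q^+)$ and is all that is needed.
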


(2.4.1.) $c(v u_4) = 3$. One sees that $T_1\subseteq C(u_3)\cap C(u_4)$, $T_4\subseteq C(u_1)$.

\begin{lemma}
\label{3T3T4neemptyset}
$G$ admits an acyclic edge $(\Delta + 2)$-coloring, or $T_3\ne \emptyset$ and $T_4\ne \emptyset$.
\end{lemma}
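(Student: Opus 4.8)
The plan is to prove the contrapositive: assuming $T_3=\emptyset$ or $T_4=\emptyset$, I will complete the coloring of $H$ to an acyclic edge coloring of $G$ with the palette $C=\{1,\ldots,\Delta+5\}$ (which is what the statement's ``$(\Delta+2)$'' should read, consistently with the rest of the section). The starting point is to pin down the structure forced by each hypothesis. In this case $u_3=v_5$ and $u_4=v_3$, so $\{3,5\}\subseteq C(u_3)$ and $\{3,4\}\subseteq C(u_4)$, and, as already noted, $T_1\subseteq C(u_3)\cap C(u_4)$ and $T_4\subseteq C(u_1)$. If $T_4=\emptyset$ then $T_{uv}\subseteq C(u_4)$, so $C(u_4)=\{3,4\}\cup T_{uv}$ with $d(u_4)=\Delta$ and $C(u_4)\cap C(v)=\{3\}$; symmetrically, if $T_3=\emptyset$ then $C(u_3)=\{3,5\}\cup T_{uv}$ with $d(u_3)=\Delta$ and $C(u_3)\cap C(v)=\{3,5\}$. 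In each subcase a constant number of edges incident to $\{u,v,u_3,u_4\}$ have their colors completely determined, which is exactly what the recoloring and cycle-checking arguments require.

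For $T_4=\emptyset$ the target is to set $uv\to 4$ (legitimate at $v$ since $4\notin C(v)$) after freeing color $4$ at $u$. As $C\setminus(C(u)\cup C(u_4))=\{5,6,7\}$, the only candidate recolorings of $uu_4$ are to $5$, $6$, or $7$. I would first rule out $5$: since $vu_3=5$ and $vu_4=3=c(uu_3)$, recoloring $uu_4\to 5$ closes the bichromatic $(3,5)$-quadrilateral $u\,u_3\,v\,u_4$. I would then try $(uu_4,uv)\overset{\divideontimes}{\to}(c',4)$ with $c'\in\{6,7\}$: the recoloring of $uu_4$ can create only a $(3,c')$-cycle through $uu_4$ (because $(C(u)\setminus\{c'\})\cap(C(u_4)\setminus\{c'\})=\{3\}$), while $uv\to 4$ can create only $(4,1)$- or $(4,3)$-cycles (the $(4,c')$-chain dead-ends at $u_4$, which has no color-$4$ edge). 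Using Lemma~\ref{lemma06} on the maximal $(3,6)$- and $(3,7)$-chains leaving the color-$3$ edge $vu_4$, at most one of $6,7$ can be blocked; should both fail together with the $(4,1)$/$(4,3)$ obstructions, I would fall back on $vu_4\to 2$, which is valid since $C\setminus(C(v)\cup C(u_4))=\{2\}$ and, after the change, $(C(v)\setminus\{2\})\cap(C(u_4)\setminus\{2\})=\emptyset$; this deletes a shared color and reduces the instance to an already-treated $2$-element configuration, or to $a_{uv}=1$ via Lemma~\ref{auv1coloring}.

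For $T_3=\emptyset$ the argument is parallel but not symmetric, because $u_3$ carries the color $5\in V_{uv}$ rather than a color of $U_{uv}$. Here $C\setminus(C(v)\cup C(u_3))=\{2,4\}$, so the natural move is to recolor $vu_3$ to $2$ or $4$; recoloring $vu_3\to 4$ turns $u_3$ into $v_4$ and places the edge from $v$ to the common neighbor on a color of $C(u)$, reducing the instance to the situation already handled in Case~2.2.1, while $vu_3\to 2$ reduces to a $\{1,2\}$-configuration. In each reduction I would verify acyclicity from $C(u_3)\cap C(v)=\{3,5\}$, Proposition~\ref{prop3001}, and Lemma~\ref{lemma06}, and then invoke the corresponding earlier case (or Corollary~\ref{auv2d(u)5}) to finish.

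The main obstacle is the bichromatic-path bookkeeping forced by the two triangles $uu_3v$ and $uu_4v$ sharing the edge $uv$: they form the $4$-cycle $u\,u_3\,v\,u_4$ colored $3,5,3,4$, and the defining feature of Case~2.4 --- namely $vu_3=5$ --- is precisely what poisons the otherwise natural recoloring $uu_4\to 5$ and blocks the quick completion available in Case~2.2.1. Consequently the proof must juggle several candidate colors ($4,6,7$ at $uu_4$ and $2,4$ at $vu_3$) and repeatedly appeal to Lemma~\ref{lemma06} to show that they cannot all be obstructed at once; keeping the two subcases $T_3=\emptyset$ and $T_4=\emptyset$ separate, because of the color-$5$ versus color-$4$ asymmetry at $u_3$ and $u_4$, is the delicate part.
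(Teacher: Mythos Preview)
Your approach has two genuine gaps.

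\textbf{For $T_3=\emptyset$:} The recoloring $vu_3\to 4$ is not even a valid recoloring of $H$. After it, the $4$-cycle $v\,u_3\,u\,u_4\,v$ carries the colors $4,3,4,3$ (since $c(uu_3)=3$, $c(uu_4)=4$, $c(vu_4)=3$), a bichromatic cycle. The alternative $vu_3\to 2$ avoids this, but then $C(v)=\{1,2,3,6,7\}$ and $A_{uv}=C(u)\cap C(v)=\{1,2,3\}$: you have \emph{increased} $a_{uv}$ to $3$, not landed in Case~2.2.1 or in any $\{1,2\}$-configuration. Invoking Lemma~\ref{3to21} from there only guarantees a return to $a_{uv}\le 2$, possibly in a later case, so this is circular.

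\textbf{For $T_4=\emptyset$:} Lemma~\ref{lemma06} does not give ``at most one of $6,7$ can be blocked.'' The obstruction to $uu_4\to c'$ is a $(3,c')$-path from $v$ (via $v_{c'}$) to $u_3$; such a path enters $u_3$ along a colour-$c'$ edge. A $(3,6)$-path enters $u_3$ via its colour-$6$ edge and a $(3,7)$-path via its colour-$7$ edge; these are different edges and both can exist when $T_3\ne\emptyset$ (so that $6,7$ may lie in $C(u_3)$). Lemma~\ref{lemma06} concerns a single colour pair and says nothing across different pairs. Your fallback $vu_4\to 2$ does keep $a_{uv}=2$, but with $c(vu_3)=5$ and $c(vu_4)=2$ you are now exactly in Case~2.5.1, which is treated \emph{after} the current Case~2.4.1 --- a forward reference, not a reduction to an ``already-treated'' configuration.

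The paper's proof avoids both problems. For $T_4=\emptyset$ it performs the simultaneous swap $(vu_4,uu_4)\to(4,5)$; one checks directly that this creates no bichromatic cycle (the only shared colour between $u$ and $u_4$ afterward is $5$, and the only shared colour between $v$ and $u_4$ is $5$, with the relevant chain dead-ending at $u$). After the swap $C(u_4)=\{4,5\}\cup T_{uv}$, so the new $T_{u_4}$ is empty and, up to relabelling the roles of $u_3$ and $u_4$, one is exactly in Case~1. Case~1 ($T_3=\emptyset$) is then handled not by recolouring $vu_3$ at all, but by a case analysis on $C(u_1)$ (and, in the sub-case $T_4\ne\emptyset$, on $C(u_2)$ and on the existence of various $(2,j)$-, $(3,4)$-, $(1,4)$-, and $(6,j)$-paths), each time producing an explicit completion or a reduction to an earlier case.
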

\begin{proof}
We consider the following two cases on whether $T_3 = \emptyset$ or not.

{\bf Case 1.} $T_3 = \emptyset$, i.e., $C(u_3) = \{3, 5\}\cup T_{uv}$.
There are two subcase on whether $T_4 = \emptyset$ or not.

{Case 1.1.} $C(u_4) = \{3, 4\}\cup T_{u v}$.
If $4\not\in C(u_1)$, then $(u u_4, u v)\overset{\divideontimes}\to (6, 4)$.
If $5\not\in C(u_1)$, then $(v u_3, u v)\overset{\divideontimes}\to (2, 5)$.
Otherwise, $4, 5\in C(u_1)$ and then $T_1\ne \emptyset$.
If $H$ contains no $(6, j)_{(u_3, v_6)}$-path for some $j\in T_1$,
then $(u u_3, u v)\overset{\divideontimes}\to (6, j)$.
Otherwise, $H$ contains a $(6, T_1)_{(u_3, v_6)}$-path.
Then $(u u_3, u u_4, u v)\overset{\divideontimes}\to (4, 6, T_1)$.

{Case 1.2.} $T_4 \ne \emptyset$.
One sees that $H$ contains a $(1, T_4)_{(u_1, v_1)}$-path.
For some $j\in T_4$, if $H$ contains no $(2, j)_{(u_2, u_4)}$-path, then $(u u_3, u u_4)\overset{\divideontimes}\to (4, j)$;
if $H$ contains no $(i, j)_{(u_4, v_i)}$-path for each $i\in [6, 7]$, then $(v u_4, v u_3, u u_3)\to (j, 3, 5)$, these reduce to an acyclic edge $(\Delta + 5)$-coloring for $H$ such that $C(u)\cap C(v) = \{1\}$.
Otherwise, $H$ contains a $(2, T_4)_{(u_2, u_4)}$-path and a $(6/7, T_4)_{(u_4, v)}$-path.
It follows $2\in C(u_2)$, $T_4\subseteq C(u_2)$, and there exists a $a_3\in [6, 7]\cap C(u_4)$.
Assume w.l.o.g. $6\in C(u_4)$ and $H$ contains a $(6, 8)_{(u_4, v_6)}$-path, where $8\in T_4$.
One sees from \ref{315234} that $5\in S_u\setminus C(u_3)$.

If $3\not\in C(u_1)$ and $H$ contains no $(2, 3)_{(u_1, u_2)}$-path,
then $(u u_1, u u_3, u v)\overset{\divideontimes}\to (3, 1, 8)$.
If $6\not\in C(u_1)$ and $H$ contains no $(i, 6)_{(u_1, u_i)}$-path-path for each $i\in \{2, 4\}$,
then $(u u_1, u v)\overset{\divideontimes}\to (6, 8)$.
If $4\not\in C(u_2)$ and $H$ contains no $(1, 4)_{(u_1, u_2)}$-path,
then $(u u_2, u u_3, u u_4)\to (4, 2, 8)$ reduces to an acyclic edge $(\Delta + 5)$-coloring for $H$ such that $C(u)\cap C(v) = \{1\}$.
Otherwise, $3\in C(u_1)$ or $H$ contains a $(2, 3)_{(u_1, u_2)}$-path;
$6\in C(u_1)$ or $H$ contains a $(6, 2/4)_{(u_1, u)}$-path;
and $4\in C(u_2)$ or $H$ contains a $(1, 4)_{(u_1, u_2)}$-path.
Since $(v u_3, u u_3)\to (2, 5)$ reduces to an acyclic edge $(\Delta + 5)$-coloring for $H$ such that $C(u)\cap C(v) = \{1, 2\}$, $T_1\subseteq C(u_2)$ and it follows that for each $j\in T_{u v}$,
mult$_{C(u_1)\cup C(u_2)\cup C(u_4)}(j)\ge 2$.

One sees that $2t_{uv} + |\{1\}| + |\{2\}| + |\{4, 3, 2, 6\}| + |\{3, 5\}| = 2\Delta - 4 + 8 = 2\Delta + 4$.
When $T_1 = \emptyset$, i.e., $w_1\le 2$,
it follows that $C(u_1) = [1, 2]\cup T_{u v}$ and $[3, 6]\subseteq C(u_2)$,
thus, $d(u_1) + d(u_2) + d(u_4)\ge 2\Delta + 4 + |\{2\}| + |\{4, 6\}| = 2\Delta + 7$.

In the other case, $T_1 \ne \emptyset$.
It follows that $2/4\in C(u_1)$.
If $6\in C(u_1)\cup C(u_2)$,
then $d(u_1) + d(u_2) + d(u_4)\ge 2\Delta + 4 + |\{2/4\}| + |\{1/4\}| + |\{6\}| = 2\Delta + 7$.
Otherwise, $6\not\in C(u_1)\cup C(u_2)$.
It follows that $4\in C(u_1)$, $H$ contains a $(4, 6)_{(u_1, u_4)}$-path.
If $2\not\in C(u_1)$, then $(u u_2, u u_4, u v)\overset{\divideontimes}\to (6, 8, 2)$.
Otherwise, $2\in C(u_1)$ and thus $d(u_1) + d(u_2) + d(u_4)\ge 2\Delta + 4 + |\{2, 4\}| + |\{1/4\}|= 2\Delta + 7$.

One sees clearly that none of ($A_{8.1}$)--($A_{8.4}$) holds.

{\bf Case 2.} $C(u_4) = [3, 4]\cup T_{uv}$.
Then $(v u_4, u u_4)\to (4, 5)$ reduces the proof to Case 1.
\end{proof}

By Lemma~\ref{3T3T4neemptyset}, hereafter we assume $T_3\ne\emptyset$ and $T_4\ne \emptyset$.

(2.4.1.1.) When there exists a $j\in T_{u v}\setminus (C(u_3)\cup C(u_4))$,
say $8\not\in C(u_3)\cup C(u_4)$,
then $2\in C(u_3)$, $6\in C(u_4)$,
and $H$ contains $(2, 8)_{(u_2, u_3)}$-path, a $(6, 8)_{(u_4, v_6)}$-path.
It follows from Corollary~\ref{auv2}(5.2) that $H$ contains a $(2, T_3)_{(u_2, u_3)}$-path,
from Corollary~\ref{auv2}(5.1) that $7\in S_u$,
and from Corollary~\ref{auv2}(5.2) and (3) that $4\in B_1$, and $4\in C(u_2)$, or $H$ contains a $(2, 4)_{(u_2, u_3)}$-path.

If $3\not\in C(u_2)$ and $H$ contains a $(3, 1)_{(u_1, u_2)}$-path,
then $(u u_2, u u_3)\to (3, 8)$, and if $2\not\in C(u_1)$, $u v\to 2$,
or otherwise, $u v\overset{\divideontimes}\to T_1$.
If $6\not\in C(u_1)$ and $H$ contains a $(6, i)_{(u_1, u_i)}$-path for each $i\in [2, 3]$,
then $(u u_1, u v)\overset{\divideontimes}\to (6, 8)$ if $H$ contains no $(4, 6)_{(u_1, u_4)}$-path,
or otherwise, $(u u_1, u u_4, u v)\overset{\divideontimes}\to (6, 8, 4)$.
In the other case, we proceed with the following proposition, or otherwise we are done:
\begin{proposition}
\label{331623}
\begin{itemize}
\parskip=0pt
\item[{\rm (1)}]
	$3\in C(u_2)$ or $H$ contains a $(3, 1)_{(u_1, u_2)}$-path.
\item[{\rm (2)}]
	$6\in C(u_1)$ or $H$ contains a $(6, i)_{(u_1, u_i)}$-path for some $i\in [2, 3]$.
\end{itemize}
\end{proposition}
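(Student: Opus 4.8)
The plan is to establish Proposition~\ref{331623} by the \emph{reduce-or-done} mechanism that governs this whole subsection: I assume that one of the two stated conclusions fails and then produce, by an explicit short sequence of recolorings each certified through Lemma~\ref{lemma06}, either an acyclic edge $(\Delta+5)$-coloring of $G$ (recorded by $\overset{\divideontimes}\to$) or an acyclic edge $(\Delta+5)$-coloring of $H$ with $|C(u)\cap C(v)|\le 1$. Since we stand in the situation that $G$ is not yet colorable and $a_{uv}=2$, either outcome is the required contradiction, so both conclusions must hold. Throughout I may use the facts already extracted in this sub-subcase: $8\notin C(u_3)\cup C(u_4)$, $2\in C(u_3)$, $6\in C(u_4)$, the existence of a $(2,8)_{(u_2,u_3)}$-path and a $(6,8)_{(u_4,v_6)}$-path, $7\in S_u$, and $4\in B_1$, together with Corollaries~\ref{auv2} and~\ref{auv2d(u)5}.

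For part~(1), suppose the conclusion fails, so $3\notin C(u_2)$ and $H$ has no $(3,1)_{(u_1,u_2)}$-path. First I would recolor $u u_2\to 3$: this is proper because $3\notin C(u_2)$, and by Lemma~\ref{lemma06} the absence of a $(3,1)_{(u_1,u_2)}$-path excludes the only $(1,3)$-cycle that the move could create. Next I recolor $u u_3\to 8$, which is legitimate since $8\notin C(u_3)$, and this frees color $2$ at $u$. It then remains to color $uv$: if $2\notin C(u_1)$ then $uv\to 2$ closes no bichromatic cycle, since $u$ now carries no other edge of color $2$ and $2\notin C(v)$; otherwise $uv\overset{\divideontimes}\to T_1$, a color of $T_1$ being free at $u$ and, lying outside $C(u_1)$, unable to reopen a $(1,\cdot)$-cycle, while the residual $(3,\cdot)$-cycle is barred by the $(2,T_3)_{(u_2,u_3)}$-path already in hand.

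For part~(2), suppose $6\notin C(u_1)$ and $H$ has no $(6,i)_{(u_1,u_i)}$-path for each $i\in\{2,3\}$. I would recolor $u u_1\to 6$, which is proper and, by Lemma~\ref{lemma06}, introduces neither a $(6,2)$- nor a $(6,3)$-cycle precisely because the corresponding $(6,2)_{(u_1,u_2)}$- and $(6,3)_{(u_1,u_3)}$-paths are absent. The only remaining threat is a $(4,6)$-cycle through $u_4$, so I split on the existence of a $(4,6)_{(u_1,u_4)}$-path. If there is none, then $(u u_1,uv)\overset{\divideontimes}\to(6,8)$ finishes, the choice $uv\to 8$ being safe because $8\notin C(u_3)\cup C(u_4)$ and, by Lemma~\ref{lemma06}, the already-produced $(6,8)_{(u_4,v_6)}$-path forbids a second $(6,8)$-path reaching $u_1$. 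If such a $(4,6)$-path exists, I instead perform $(u u_1,u u_4,uv)\overset{\divideontimes}\to(6,8,4)$, where recoloring $u u_4$ to $8$ destroys that path before $uv$ takes color $4$.

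The step I expect to be the main obstacle is the cycle-freeness bookkeeping after each recoloring, rather than any single clever move. Every recoloring must be shown not to create a new bichromatic cycle, and each such verification rests on invoking Lemma~\ref{lemma06} against the previously certified $(2,8)_{(u_2,u_3)}$- and $(6,8)_{(u_4,v_6)}$-paths and on the membership facts $7\in S_u$ and $4\in B_1$. The genuinely delicate points are the two internal splits---whether $2\in C(u_1)$ in part~(1), and whether a $(4,6)_{(u_1,u_4)}$-path is present in part~(2)---because these decide which color may legally be placed on $uv$; confirming that the chosen branch avoids \emph{all} relevant $(i,j)$-cycles is the routine-but-intricate core of the argument.
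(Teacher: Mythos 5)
Your proposal is correct and follows essentially the same route as the paper: the paper's proof of this proposition is exactly the pair of recolorings $(uu_2,uu_3)\to(3,8)$ followed by $uv\to 2$ or $uv\overset{\divideontimes}\to T_1$ for part (1), and $(uu_1,uv)\overset{\divideontimes}\to(6,8)$ or $(uu_1,uu_4,uv)\overset{\divideontimes}\to(6,8,4)$ according to the $(4,6)_{(u_1,u_4)}$-path split for part (2). Only two cosmetic points: the recolorings in part (1) must be read as simultaneous (doing $uu_2\to 3$ before $uu_3\to 8$ is momentarily improper since $uu_3$ carries colour $3$), and in part (2) the recolouring $uu_4\to 8$ does not literally destroy the $(4,6)_{(u_1,u_4)}$-path — it merely frees colour $4$ at $u$, with Lemma~\ref{lemma06} then excluding a $(4,6)_{(u_1,v_6)}$-path.
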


One sees from Corollary~\ref{auv2d(u)5}(7.1) that $2\in C(u_1)$ or $1\in S_u$.
Then $|\{1, 4\}| + |\{2\}| + |\{2, 3, 5\}| + |\{4, 3, 6\}| + |\{4, 3, 6, 7, 5, 1/2\}| + 2t_{u v} + t_0 = 2\Delta - 4 + 15 + t_0 = 2\Delta + 11 + t_0$.
\begin{itemize}
\parskip=0pt
\item
    $H$ contains no $(7, 8)_{(u_3, v_7)}$-path.
    It follows from Corollary~\ref{auv2}(4) that $5\in B_1\subseteq C(u_1)$, and $H$ contains a $(5, i)_{(u_3, u_i)}$-path for some $i\in \{2, 4\}$.
    Then $\sum_{x\in N(u)\setminus \{v\}}d(x)\ge 2\Delta + 11 + t_0 + |\{5\}| = 2\Delta + 12 + t_0$.
    One sees from $1, 4, 5, 6/2/3\in C(u_1)$ that $t_1\ge 2$.
    Hence, $T_1\setminus T_0\ne \emptyset$ and then $H$ contains a $(4, T_1)_{(u_1, u_4)}$-path.
    It follows from Corollary~\ref{auv2}(5.2) that $2\in C(u_1)\cup C(u_4)$,
    and from Corollary~\ref{auv2}(5.1) that $1\in C(u_4)$, or $H$ contains  a $(1, i)_{(u_4, u_i)}$-path for some $i\in \{2, 3\}$.
    Thus, $\sum_{x\in N(u)\setminus \{v\}}d(x)\ge \sum_{i\in [1, 4]}w_i + 2t_{u v} + t_0\ge |\{1, 4, 5\}| + |\{2\}| + |\{3, 5, 2\}| + |\{3, 4, 6\}| + |\{4, 3, 6, 7, 5, 1, 2\}| + 2t_{u v} + t_0= 2\Delta + 13 + t_0$.
    It follows that $\sum_{x\in N(u)\setminus \{v\}}d(x) = 2\Delta + 13$, $t_0 = 0$,
    mult$_{S_u}(i) = 1$, $i = 1, 7$, mult$_{S_u}(j) = 2$, $i = 2, 3, 6$, and mult$_{S_u}(5) = 3$.

    \quad If $5\not\in C(u_2)$, one sees that $H$ contains a $(4, 5)_{(u_3, u_4)}$-path,
    then $(u u_2, u u_3, u v)\overset{\divideontimes}\to (5, 8, T_1)$.
    Otherwise, $5\in C(u_2)\setminus C(u_4)$.
    When $H$ contains a $(7, 4)_{(u_4, v_7)}$-path, one sees that $7\not\in S_u\setminus C(u_4)$,
    if $H$ contains no $(7, j)_{(u_3, v_7)}$-path for some $j\in T_1$,
    then $(u u_3, u v)\to (7, j)$; otherwise, $(u u_2, u u_3, u v)\to (7, T_3, T_1)$.
    In the other case, $H$ contains no $(7, 4)_{(u_4, v_7)}$-path.
    If $H$ contains no $(4, 6)_{(u_4, v_6)}$-path, then $(v u_4, u u_4, u v)\to (4, 8, T_1)$.
    Otherwise, $H$ contains a $(4, 6)_{(u_4, v_6)}$-path.
    We distinguish the following three scenarios.
    \begin{itemize}
    \parskip=0pt
    \item
        $6\in C(u_3)\setminus (C(u_1)\cup C(u_2))$.
        It follows from \ref{331623}(2) that $3\in C(u_1)$,
        and then from \ref{331623}(1) that $1\in C(u_2)$.
        Hence, $1\not\in C(u_3)\cup C(u_4)$ and $H$ contains a $(1, 2)_{(u_2, u_4)}$-path.
        Then $(u u_1, u u_3, u v)\overset{\divideontimes}\to (6, 1, 8)$.
    \item
        $6\in C(u_2)\setminus (C(u_1)\cup C(u_3))$.
        One sees from \ref{331623}(2) that $2\in C(u_1)$ and $H$ contains $(2, 6)_{(u_1, u_2)}$-path.
        It follows that $1\in C(u_4)$ or $H$ contains a $(1, 3)_{(u_3, u_4)}$-path,
        i.e., $1\not\in C(u_2)$, and $1\not\in C(u_3)$ or $H$ contains a $(1, 3)_{(u_3, u_4)}$-path.
        Then $(u u_1, u u_2, u u_4, u v)\overset{\divideontimes}\to (1, 6, 2, 8)$.
    \item
        $6\in C(u_1)\setminus (C(u_2)\cup C(u_3))$.
        If $H$ contains no $(1, 6)_{(u_i, u_1)}$-path,
        then if $H$ contains no $(6, j)_{(u_3, v_6)}$-path for some $j\in T_1$,
        then $(u u_3, u v)\to (6, j)$; otherwise, $(u u_2, u u_3, u v)\to (6, T_3, T_1)$.
        Otherwise, $H$ contains a $(6, 1)_{(u_i, u_1)}$-path for some $i\in [2, 3]$.

        \quad If $1\in C(u_3)\setminus (C(u_2)\cup C(u_4))$,
        one sees that $H$ contains a $(1, 6)_{(u_1, u_3)}$-path and from \ref{331623}(1) that $3\in C(u_2)\setminus C(u_1)$,
        then $(u u_1, u u_2, u u_3, u v)\overset{\divideontimes}\to (3, 1, 6, 8)$.
        Otherwise, $1\in C(u_2)\setminus (C(u_3)\cup C(u_4))$ and $H$ contains a $(1, 6)_{(u_1, u_2)}$-path.
        It follows that $2\in C(u_4)\cap B_3$ and $H$ contains a $(1, 2)_{(u_2, u_4)}$-path.
        Then $(u u_1, u u_2, u u_4, u v)\overset{\divideontimes}\to (2, 6, 1, 8)$.
     \end{itemize}
\item
    $H$ contains a $(7, 8)_{(u_3, v_7)}$-path.
    One sees clearly that $7\in C(u_3)$.
    When $7\not\in S_u\setminus C(u_3)$,
    if $1\not\in S_u$, then $(u u_1, u u_3, u v)\overset{\divideontimes}\to (7, 1, 8)$.
    if $H$ contains no $(3, 7 )_{(u_1, u_3)}$-path, then $(u u_3, u v)\overset{\divideontimes}\to (7, 8)$.
    Otherwise, $1\in S_u$, $3\in C(u_1)$ and $H$ contains a $(3, 7)_{(u_1, u_3)}$-path.
    If $H$ contains no $(2, j)_{(u_1, u_2)}$-path for some $j\in T_1$,
    then $(u u_4, u u_1, u v)\to (7, j, 8)$.
    If $H$ contains no $(4, j)_{(u_1, u_4)}$-path for some $j\in T_1$,
    then $(u u_2, u u_1, u v)\to (7, j, 8)$.
    Otherwise, $2\in C(u_1)$ and $T_1\subseteq C(u_2)\cap C(u_4)$.
    Since, $1, 4, 3, 2\in C(u_1)$, $t_0\ge t_1\ge 2$.
    Then $\sum_{x\in N(u)\setminus \{v\}}d(x)\ge \sum_{i\in [1, 4]}w_i + 2t_{u v} + t_0\ge |\{1, 4, 2, 3\}| + |\{2\}| + |\{3, 5, 2, 7\}| + |\{3, 4, 6\}| + |\{4, 6, 5, 1\}| + 2t_{u v} + 2 = 2\Delta + 14$.

    \quad In the other case, $7\in S_u\setminus C(u_3)$.
    One sees from \ref{331623} (2) that $6/2/3\in C(u_1)$.
    It follows that $t_1\ge 1$.
    One sees that $|\{1, 4\}| + |\{2\}| + |\{2, 3, 5, 7\}| + |\{4, 3, 6\}| + |\{4, 3, 6, 7, 5, 1/2\}| + 2t_{u v} + t_0 = 2\Delta - 4 + 16 + t_0 = 2\Delta + 12 + t_0$.
    It follows that $t_0\ge t_1\ge 1$, or if $T_1\setminus T_0\ne \emptyset$,
    then $1\in S_u$ and $2\in S_u\setminus C(u_3)$.
    Hence, mult$_{S_u\setminus C(u_3)}(5) = 1$, mult$_{S_u}(3) = 1$, mult$_{S_u\setminus C(u_4)}(6) = 1$,
    and then from \ref{315234} that $3\in C(u_1)$, $5\in C(u_4)$ and $H$ contains a $(3, 5)_{(u_1, u_4)}$-path,
    from \ref{331623}(1) that $1\in C(u_2)$ and $H$ contains a $(1, 3)_{(u_1, u_2)}$-path.
    \begin{itemize}
    \parskip=0pt
    \item
        $T_1\subseteq T_0$, it follows that $t_1 = 1$, $2\not\in C(u_1)\cup C(u_4)$,
        $C(u_3) = \{1, 3, 4\}\cup T_{uv}$ and from \ref{331623}(2) that $6\in C(u_3)$ and $H$ contains a $(3, 6)_{(u_1, u_3)}$-path.
        One sees clearly that $6\not\in C(u_1)\cup C(u_2)$.
        Then $(u u_2, u u_4, u v)\overset{\divideontimes}\to (6, 8, 2)$.
    \item
        $T_1\ne \emptyset$.
        It follows from Corollary~\ref{auv2}(5.2) and (3) that $2\in B_1\cup B_3$ and $2\in C(u_1)\cup C(u_4)$,
        $1\in C(u_4)$ or $H$ contains a $(1, i)_{(u_4, u_i)}$-path for some $i\in [2, 3]$.
        Hence, $\sum_{x\in N(u)\setminus \{v\}}d(x)\ge |\{1, 3, 4\}| + |\{2, 1\}| + |\{3, 5, 2, 7\}| + |\{4, 3, 6, 5\}| + |\{4, 6, 7, 2\}| + 2t_{u v} + t_0 = 2\Delta - 4 + 17 + t_0 = 2\Delta + 13 + t_0$.
        It follows that $\sum_{x\in N(u)\setminus \{v\}}d(x) = 2\Delta + 13$, $t_0 = 0$, mult$_{S_u}(1) = 1$.
        Further, we have $H$ contains a $(1, 2)_{(u_2, u_4)}$-path.
        One sees that $T_1\cap C(u_2) = \emptyset$.
        If $H$ contains no $(4, 5)_{(u_2, u_4)}$-path,
        then $(u u_2, u u_3, u v)\overset{\divideontimes}\to (5, T_3, T_1)$. Otherwise, $H$ contains a $(4, 5)_{(u_2, u_4)}$-path.
        Then $(u u_1, u u_3, u v)\overset{\divideontimes}\to (5, 1, 8)$.
    \end{itemize}
\end{itemize}

(2.4.1.2.) $T_{u v} \subseteq C(u_3)\cup C(u_4)$.
Let $C(u_3) = W_3\cup T_4\cup C_{34}$, $C(u_4) = W_4\cup T_3\cup C_{34}$.
Then $T_3\cup T_4\subseteq B_1\subseteq C(u_1)$.
One sees that if $2\not\in C(u_4)$ or $H$ contains no $(2, j)_{(u_2, u_4)}$-path,
then $4\in B_1$ and mult$_{S_u}(4)\ge 2$,
if for some $j\in T_3$, $H$ contains no $(i, j)_{(u_3, v_i)}$-path for each $i\in [6, 7]$,
then $5\in B_1$ and mult$_{S_u}(5)\ge 2$.
If $H$ contains no $(4, i)_{(u_4, v_i)}$-path for each $i\in C(v)$ and contains no $(2, j)_{(u_2, 4_4)}$-path for some $j\in T_4$,
then $(v u_4, u u_4)\to (4, j)$ reduces to an acyclic edge $(\Delta + 5)$-coloring for $H$ such that $C(u)\cap C(v)= \{1\}$.
Otherwise, we proceed with the following proposition:
\begin{proposition}
\label{314Cv2T4}
$H$ contains a $(4, i)_{(u_4, v_i)}$-path for some $i\in C(v)$ or a $(2, T_4)_{(u_2, u_4)}$-path.
\end{proposition}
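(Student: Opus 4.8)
The plan is to prove the contrapositive. The statement is exactly the logical negation of the hypothesis ``$H$ has no $(4,i)_{(u_4,v_i)}$-path for any $i\in C(v)$, and some $j\in T_4$ admits no $(2,j)_{(u_2,u_4)}$-path,'' so it suffices to show that \emph{under that hypothesis} the recoloring already named in the text, namely $c'=(vu_4,uu_4)\to(4,j)$, yields a valid acyclic edge $(\Delta+5)$-coloring of $H$ with $C(u)\cap C(v)=\{1\}$. That outcome is one of the two admissible conclusions of the present $a_{uv}=2$ reduction and is finished off by Lemma~\ref{auv1coloring} via \ref{auv1}; hence whenever we have not reduced, the hypothesis fails, which is precisely the disjunction asserted. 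Recall from Case (2.4.1.2) that $C(v)=\{1,3,5,6,7\}$, $c(uu_4)=4$, $c(vu_4)=3$, $c(vu_3)=5$, and that $T_3\cup T_4\subseteq B_{1}\subseteq C(u_1)\cap C(v_1)$ with $T_4\subseteq C(u_1)$.

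First I would check properness of $c'$. Since $4\notin C(v)$ and $4$ occurs at $u_4$ only on the old spoke $uu_4$, after the swap colour $4$ sits only on $vu_4$; and since $j\in T_4\subseteq T_{uv}$ we have $j\notin C(u)$ and $j\notin C(u_4)$, so colour $j$ sits only on $uu_4$. Thus $c'$ is a proper edge coloring, and reading off the palettes gives $C(u)=\{1,2,3,j\}$ and $C(v)=\{1,4,5,6,7\}$, whence $C(u)\cap C(v)=\{1\}$ as required.

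The substance is acyclicity: any new bichromatic cycle must traverse a recolored edge. A cycle through $vu_4$ is a $(4,i)$-cycle with $i\in C(v)$, which forces a $(4,i)_{(u_4,v_i)}$-path and is excluded by hypothesis (the relevant $i$ lie in $\{1,5,6,7\}$, all covered). A cycle using both recolored edges would be a $(4,j)$-cycle, impossible since after the swap $u$ carries no colour $4$ (and $v$ no colour $j$). A cycle through $uu_4$ alone is a $(j,x)$-cycle with $x\in C(u)\cap C(u_4)\cap\{1,2,3\}$, equivalently a $(x,j)_{(u_x,u_4)}$-path closing with $x$ at $u_4$. The case $x=3$ is killed by the very swap $vu_4\to4$, which deletes the unique colour-$3$ edge at $u_4$, so no colour-$3$ edge remains there to close the cycle; the case $x=2$ is killed by the choice of $j$ having no $(2,j)_{(u_2,u_4)}$-path. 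This leaves $x=1$, which I expect to be the main obstacle: one must forbid a $(1,j)_{(u_1,u_4)}$-path that reaches $u_4$ on a colour-$1$ edge. The tool is $j\in T_4\subseteq B_1$, so that $H$ already contains a $(1,j)_{(u_1,v_1)}$-path leaving $u_1$ along its unique colour-$j$ edge; because the $(1,j)$-subgraph is a forest this pins the maximal $(1,j)$-walk out of $u$ through $uu_1$ (note $j\notin C(u)$), and Lemma~\ref{lemma06} then precludes a second endpoint at $u_4$. Getting this last exclusion airtight — reconciling the coexistence of the $B_1$-witness path to $v_1$ with the forbidden path to $u_4$, using the triangle structure of Case (2.4.1) and $T_4\subseteq C(u_1)$ — is the delicate point of the argument; once it is settled, $c'$ is acyclic with $C(u)\cap C(v)=\{1\}$, and we are done.
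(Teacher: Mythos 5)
Your proposal is the paper's own argument: the sentence immediately preceding the proposition disposes of the negation by exactly the recoloring $(vu_4,uu_4)\to(4,j)$ followed by the reduction to $|C(u)\cap C(v)|=1$, with no further verification, so your write-up is if anything more careful than the source. The one step you leave open does close with the ingredients you already name, and without any appeal to Lemma~\ref{lemma06}: since $j\in T_4\subseteq B_1$ and $j\notin C(u)\cup C(v)$, the maximal $(1,j)$-path containing the edge $uu_1$ is forced to be $u,u_1,\dots,v_1,v$ --- it leaves $u_1$ along its unique $j$-coloured edge, contains the $B_1$-witness path to $v_1$, continues along $v_1v$, and terminates at $u$ and at $v$ because neither vertex sees colour $j$. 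A $(1,j)$-bichromatic cycle through $uu_4$ would force $u_4$ to be an internal vertex of this path; but every internal vertex of a $(1,j)$-path carries a $j$-coloured edge, while $j\in T_4$ means $j\notin C(u_4)$. With that two-line observation inserted, your proof is complete and coincides with the paper's.
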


When $T_1 = \emptyset$, i.e., $W_1\le 2$. One sees from \ref{315234} that $W_1 = \{1, 5\}$,
or $W_1 = \{1, i\}$ and $H$ contains a $(5, i)_{(u_1, u_i)}$-path, where $i\in \{2, 3, 4\}$.

Assume $C(u_1) = \{1, i\}\cup T_{u v}$.
We assume w.l.o.g. $6\in C(u_3)$, $H$ contains a $(6, 8)_{(u_3, v_6)}$-path, where $8\in T_3$.
If $H$ contains no $(i, 6)_{(u_1, u_i)}$-path,
then $(u u_1, u v)\overset{\divideontimes}\to (6, 8)$.
If $7\not\in S_u$, then $(u u_1, u v)\overset{\divideontimes}\to (7, 8)$,
or $(u u_i, u u_1, u v)\overset{\divideontimes}\to (7, 5, 8)$.
If $1\not\in C(u_i)$ and $H$ contains no $(1, j)_{(u_i, u_j)}$-path for each $j\in C(u)$,
then $(u u_i, u u_1, u v)\overset{\divideontimes}\to (1, 5, 8)$.
Otherwise, we proceed with the following proposition:
\begin{proposition}
\label{i6i671j}
If $C(u_1) = \{1, i\}\cup T_{uv}$, then $H$ contains a $(i, 6)_{(u_1, u_i)}$-path, $1, 7\in S_u$,
and if $1\not\in C(u_i)$, then $H$ contains a $(1, j)_{(u_i, u_j)}$-path for some $j\in C(u)$.
\end{proposition}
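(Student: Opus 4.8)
The plan is to read Proposition~\ref{i6i671j} as the common ``otherwise'' branch of the three recoloring dichotomies stated immediately before it, under the standing assumptions of Case~(2.4.1.2.): $C(u_1)=\{1,i\}\cup T_{uv}$ with $i\in\{2,3,4\}$, together with $6\in C(u_3)$ and a $(6,8)_{(u_3,v_6)}$-path where $8\in T_3$, and the fact supplied by \ref{315234} that $H$ contains a $(5,i)_{(u_1,u_i)}$-path (since $5\notin C(u_1)$). I would prove each of the three conclusions by contraposition: whenever the corresponding recoloring is available, the $\overset{\divideontimes}{\to}$ move produces an acyclic edge $(\Delta+5)$-coloring of $G$ and we are done, so in the remaining case the negation must hold.

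First I would treat the $(i,6)$-path. Note $6\notin C(u_1)$ because $6\in C(v)$ and $C(u_1)=\{1,i\}\cup T_{uv}$ with $i\le 4$ and $\min T_{uv}\ge 8$. If $H$ had no $(i,6)_{(u_1,u_i)}$-path, then recoloring $uu_1\to 6$ creates no bichromatic $\{i,6\}$-cycle through $u$, and $uv\to 8$ is safe by the $(6,8)_{(u_3,v_6)}$-path and Lemma~\ref{lemma06}; hence $H$ contains a $(i,6)_{(u_1,u_i)}$-path. Next, for $7\in S_u$: if $7\notin S_u$ then color $7$ is available at $u$, and one of $(uu_1,uv)\overset{\divideontimes}{\to}(7,8)$ or $(uu_i,uu_1,uv)\overset{\divideontimes}{\to}(7,5,8)$ applies, the latter using the $(5,i)_{(u_1,u_i)}$-path; so $7\in S_u$. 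Finally, for the $(1,j)$-path and $1\in S_u$: if $1\notin C(u_i)$ and $H$ had no $(1,j)_{(u_i,u_j)}$-path for every $j\in C(u)$, then $(uu_i,uu_1,uv)\overset{\divideontimes}{\to}(1,5,8)$ is valid. Thus either $1\in C(u_i)$, giving $1\in S_u$ at once (a non-spoke edge colored $1$ at $u_i$), or $H$ contains a $(1,j)_{(u_i,u_j)}$-path; in the latter case, since $1\notin C(u_i)$ the path leaves $u_i$ along its $j$-edge and must reach $u_j$ along an edge colored $1$ (its $j$-edge at $u_j$ being the spoke $uu_j$), so again $1\in C(u_j)\setminus\{j\}$ and $1\in S_u$. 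This establishes all three assertions.

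The main obstacle is the routine-but-delicate verification that each $\overset{\divideontimes}{\to}$ recoloring introduces no new bichromatic cycle, since the moves recolor the spokes $uu_1$ and $uu_i$ simultaneously and thereby alter the palettes at $u_1$ and $u_i$. I expect this to reduce, in every case, to tracking maximal bichromatic paths exactly in the style of Lemma~\ref{lemma06}, with the absence hypotheses ($H$ contains no $(i,6)$-, no $(1,j)$-path) ruling out cycles in the newly introduced color, and with $8\in T_3$, $6\notin C(u_1)$, the $(6,8)$-path and the $(5,i)$-path serving as the structural facts that keep the color~$8$, color~$6$, and color~$5$ reassignments safe.
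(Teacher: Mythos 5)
Your proposal matches the paper's own argument: the paper establishes Proposition~\ref{i6i671j} precisely as the ``otherwise'' branch of the three recoloring dichotomies you list, namely $(uu_1,uv)\overset{\divideontimes}\to(6,8)$ when no $(i,6)_{(u_1,u_i)}$-path exists, $(uu_1,uv)\overset{\divideontimes}\to(7,8)$ or $(uu_i,uu_1,uv)\overset{\divideontimes}\to(7,5,8)$ when $7\notin S_u$, and $(uu_i,uu_1,uv)\overset{\divideontimes}\to(1,5,8)$ when $1\notin C(u_i)$ and no $(1,j)_{(u_i,u_j)}$-path exists, each relying on the $(5,i)_{(u_1,u_i)}$-path from \ref{315234} and the $(6,8)_{(u_3,v_6)}$-path exactly as you describe. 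Your added observations (that $6\notin C(u_1)$ since $6\in V_{uv}$ while $C(u_1)=\{1,i\}\cup T_{uv}$, and that $1\in S_u$ follows in either branch of the third dichotomy) are correct elaborations of what the paper leaves implicit.
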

\begin{itemize}
\parskip=0pt
\item
    $W_1 = \{1, 5\}$.
    Since $4\not\in C(u_1)$, $2\in C(u_4)$,
    and $H$ contains a $(2, T_4)_{(u_2, u_4)}$-path, i.e., $T_4\subseteq C(u_2)$.
    It follows that $t_0\ge t_4\ge 2$.
    Since $u u_1\to i\in [6, 7]$ reduces to an acyclic edge $(\Delta + 5)$-coloring for $H$ such that $C(u)\cap C(v) = \{3, i\}$, $T_3\cup T_4\subseteq B_6\cap B_7$.
    It follows that $b_3 = 5$ and $H$ contains a $(5, T_4)_{(u_3, u_4)}$-path.
    If $1\not\in C(u_2)$ and $H$ contains no $(1, i)_{(u_2, u_i)}$-path for each $i\in [3, 4]$,
    then $(u u_2, u u_1, u v)\overset{\divideontimes}\to (1, 2, T_3)$.
    If $1\not\in C(u_4)$ and $H$ contains no $(1, i)_{(u_4, u_i)}$-path for each $i\in [2, 3]$,
    then $(u u_4, u u_1, u v)\overset{\divideontimes}\to (1, 4, T_3)$.
    If $1\not\in C(u_3)$ and $H$ contains no $(1, i)_{(u_3, u_i)}$-path for each $i\in \{2, 4\}$,
    then $(u u_3, u u_1, u v)\overset{\divideontimes}\to (1, 3, T_4)$.
    It follows that mult$_{S_u}(1)\ge 2$.
    One sees that the same argument applies if $u u_1\to 6/7$.
    Hence, $_{S_u}(i)\ge 2$, $i = 6, 7$.
    Thus, $\sum_{x\in N(u)\setminus \{v\}}d(x)\ge |\{1, 5\}| + |\{2\}| + |\{a_3, 3, 5\}| + |\{4, 3, 2, 5\}| + 2|\{1, 6, 7\}| + 2t_{u v} + t_0 \ge 16 + 2t_{u v} + 2 = 2\Delta + 14$.
\item
     $W_1 = \{1, 2\}$ and $H$ contains a $(2, 5)_{(u_1, u_2)}$-path, $5\in C(u_2)$.
     Since $4\not\in C(u_1)$, $2\in C(u_4)$, and $H$ contains a $(2, T_4)_{(u_2, u_4)}$-path, i.e., $T_4\subseteq C(u_2)$.
     It follows that $t_0\ge t_4\ge 2$.
     It follows from \ref{i6i671j} that $6\in C(u_2)$, $1, 7\in S_u$, $1/3/4\in C(u_2)$,
     and if $1\not\in C(u_2)$, then $H$ contains a $(1, i)_{(u_2, u_i)}$-path for some $i\in [3, 4]$.
     One sees that $\sum_{x\in N(u)\setminus \{v\}}d(x)\ge\sum_{i\in [1, 4]}w_i + 2t_{u v} + t_0\ge |\{1, 2\}| + |\{2, 5, 6\}| + |\{3, 5, a_3, 6\}| + |\{4, 3, b_3, 2\}| + |\{1\}| + 2t_{u v} + 2 = 2\Delta + 12$.
     Since $t_3\ge 2$, $T_3\ne\emptyset$.
     It follows that $4\in C(u_3)$ and $H$ contains a $(4, T_3)_{(u_3, u_4)}$-path.
     If $5\not\in C(u_4)$, then $(u u_4, u u_3, u v)\overset{\divideontimes}\to (5, T_3\cap T_2, 4)$.
     Otherwise, $5\in C(u_4)$.
     Then $\sum_{x\in N(u)\setminus \{v\}}d(x)\ge \sum_{i\in [1, 4]}w_i + 2t_{u v} + t_0\ge |\{1, 2\}| + |\{2, 5, 6\}| + |\{4, 6, 3, 5\}| + |\{4, 3, 5, 2\}| + |\{1, 7\}| + 2t_{u v} + 2 = 2\Delta + 13$.
     It follows that $1\in C(u_2)\setminus (C(u_3)\cup C(u_4))$ and $4\not\in C(u_2)$.
     Then $(u u_1, u u_2, u u_4, u v)\overset{\divideontimes}\to (5, 4, 1, 8)$.
\item
     $W_1 = \{1, 3\}$ and $H$ contains a $(3, 5)_{(u_1, u_4)}$-path, $5\in C(u_4)$.
     Since $4\not\in C(u_1)$, $2\in C(u_4)$, and $H$ contains a $(2, T_4)_{(u_2, u_4)}$-path, i.e., $T_4\subseteq C(u_2)$.
     It follows that $t_0\ge t_4\ge 2$.
     If $4, 5\not\in C(u_2)$, $(u u_2, u v)\to (5, 2)$ or $(u u_2, u u_1, u v)\to (5, 2, T_3)$.
     Otherwise, $4/5\in C(u_2)$.
     It follows from \ref{i6i671j} that $1, 7\in S_u$ and $H$ contains a $(3, 6)_{(u_1, u_3)}$-path.
     If $6\not\in C(u_2)\cup C(u_4)$, then $(u u_4, u v)\overset{\divideontimes}\to (6, 4)$,
     or $(u u_4, u u_1, u v)\overset{\divideontimes}\to (6, 4, T_4)$.
     Otherwise, $6\in C(u_2)\cup C(u_4)$.
     Then $\sum_{x\in N(u)\setminus \{v\}}d(x)\ge \sum_{i\in [1, 4]}w_i + 2t_{u v} + t_0\ge |\{1, 3\}| + |\{2, 4/5\}| + |\{a_3, 6, 3, 5\}| + |\{4, 3, 5, 2\}| + |\{1, 6, 7\}| + 2t_{u v} + 2 = 2\Delta + 13$.
     It follows that $t_0 = t_4 = 2$ and $T_3\setminus C(u_2)\ne \emptyset$.
     Hence $a_3 = 4$, and $H$ contains a $(4, T_3)_{(u_3, u_4)}$-path.
     One sees that $2\not\in C(u_3)$. Then $(u u_2, u v)\overset{\divideontimes}\to (T_3, 2)$.
\item
     $W_1 = \{1, 4\}$ and $H$ contains a $(4, 5)_{(u_1, u_4)}$-path, $5\in C(u_4)$.
     It follows from \ref{i6i671j} that $1, 7\in S_u$, $6\in C(u_4)$, $H$ contains a $(4, 6)_{(u_1, u_4)}$-path,
     and if $1\not\in C(u_4)$, then $H$ contains a $(1, i)_{(u_4, u_i)}$-path for some $i\in [2, 3]$.
     It follows that $1\in C(u_3)\cup C(u_4)$ or $2\in C(u_4)$.
     One sees clearly that $7\in C(u_2)\setminus (C(u_3)\cup C(u_4))$,
     and if $2\not\in C(u_4)$, then $4\in B_1$.
     It follows from \ref{314Cv2T4} that $2\in C(u_4)$, $H$ contains a $(2, T_4)_{(u_2, u_4)}$-path and $T_4\subseteq C(u_2)$.
     One sees that $t_0\ge t_4\ge 3$.
     Then $\sum_{x\in N(u)\setminus \{v\}}d(x)\ge \sum_{i\in [1, 4]}w_i + 2t_{u v} + t_0\ge |\{1, 4\}| + |\{2, 1, 7\}| + |\{a_3, 6, 3, 5\}| + |\{4, 3, 5, 6, 2\}| + 2t_{u v} + 3 = 2\Delta + 13$.
     It follows that $t_0  = t_4 = 3$ and $a_3 = 4$, $2\not\in C(u_3)$.
     Then $(u u_2, u v)\overset{\divideontimes}\to (T_3, 2)$.
\end{itemize}

In the other case, $T_1\ne\emptyset$.
Since $T_1\subseteq C(u_3)\cap C(u_4)\subseteq C_{34}$, it follows that $c_{34}\ge t_1\ge 1$.
One sees from $\Delta + 7\ge d(u_3) + d(u_4)\ge w_3 + w_4 + t_{u v} + c_{34}$ that $w_3 + w_4\le 9 - c_{34}\le 8$ and $w_3, w_4\le 5$.
One sees from Corollary~\ref{auv2d(u)5}(7.2) that if $1\not\in C(u_4)$ and $H$ contains no $(1, i)_{(u_4, u_i)}$-path for each $i\in [2, 3]$,
then $2\in C(u_1)$ and $H$ contains a $(2, T_1)_{(u_1, u_2)}$-path, $T_1\subseteq C(u_2)$.

If $1\not\in C(u_3)$, $H$ contains no $(1, i)_{(u_i, u_i)}$-path for each $i\in \{2, 4\}$,
and $3\not\in S_u\setminus C(u_4)$,
then $(u u_1, u u_3, u v)\overset{\divideontimes}\to (3, 1, T_4)$.
Otherwise, we proceed with the following proposition:
\begin{proposition}
\label{311243}
If $1\not\in C(u_3)$ and $H$ contains no $(1, i)_{(u_i, u_i)}$-path for each $i\in \{2, 4\}$,
then $3\in C(u_1)\cup C(u_2)$.
\end{proposition}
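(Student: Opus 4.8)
The plan is to read Proposition~\ref{311243} as the exact contrapositive of the recoloring claim stated immediately before it, so that almost nothing new has to be proved beyond a bookkeeping identity about the multiset $S_u$. Recall the standing data of this case: $d(u)=5$, $A_{uv}=\{1,3\}$ with $c(uu_i)=i$ for $i\in[1,4]$, $c(vu_3)=5$ and $c(vu_4)=3$, so that $3\in C(u_3)$ (from $c(uu_3)=3$) and $3\in C(u_4)$ (from $c(vu_4)=3$), while $C(v)=\{1,3\}\cup[5,7]$ and $T_{uv}=[8,\Delta+5]\subseteq C(u_3)\cup C(u_4)$.

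First I would pin down the equivalence $3\notin S_u\setminus C(u_4)\iff 3\notin C(u_1)\cup C(u_2)$. Since a vertex carries at most one edge of each colour, the colour $3$ can enter $S_u=\biguplus_{i\in[1,4]}(C(u_i)\setminus\{i\})$ only through $u_1,u_2$ or $u_4$: the contribution of $u_3$ is suppressed because $c(uu_3)=3$, and $u_4$ does contribute a copy of $3$ because $c(vu_4)=3\in C(u_4)$. Deleting $C(u_4)$ removes exactly that copy, so the surviving occurrences of $3$ in $S_u\setminus C(u_4)$ come solely from $u_1$ and $u_2$; as $3\neq1,2$, this gives $3\in S_u\setminus C(u_4)\iff 3\in C(u_1)\cup C(u_2)$, which is precisely the conclusion sought.

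With this identity in hand the proof is a contrapositive: assuming the hypotheses $1\notin C(u_3)$ and that $H$ has no $(1,i)_{(u_3,u_i)}$-path for $i\in\{2,4\}$, suppose $3\notin C(u_1)\cup C(u_2)$; then $3\notin S_u\setminus C(u_4)$ and all three premises of the preceding recoloring claim hold, so $(uu_1,uu_3,uv)\overset{\divideontimes}\to(3,1,T_4)$ produces an acyclic edge $(\Delta+5)$-colouring of $G$ and we are done. Since in the present branch we assume we are not yet done, $3\in C(u_1)\cup C(u_2)$ follows. The only substantive content is therefore the validity of that recoloring, which I would verify (or quote) as follows: $uu_1\to3$ is proper since $3\notin C(u_1)$ and $uu_3\to1$ is proper since $1\notin C(u_3)$; a new $(1,2)$- or $(1,4)$-cycle through the recoloured $uu_3$ is blocked by the path hypothesis, a $(1,3)$-cycle is impossible because after recolouring neither $u_1$ nor $u_3$ retains an edge of the opposite colour, and a $(3,2)$-cycle through $uu_1$ is killed by $3\notin C(u_2)$.

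The step I expect to be the main obstacle is ruling out the bichromatic cycles created at $u_4$ and at the re-added edge $uv$. The key observations are that the unique colour-$3$ edge at $u_4$ is $vu_4$ and $v$ carries no colour-$4$ edge (as $4\notin C(v)=\{1,3\}\cup[5,7]$), which is exactly why the premise is phrased with $S_u\setminus C(u_4)$ rather than $S_u$, and that choosing the colour $\gamma$ of $uv$ inside $T_4=T_{uv}\setminus C(u_4)$ forces $\gamma\notin C(u_4)$, so the colour-$3$ neighbour $u_4$ of $v$ cannot extend a $(3,\gamma)$-path. Making each cycle check watertight is delicate, most notably the $(1,\gamma)$-case: because $T_{uv}\subseteq C(u_3)\cup C(u_4)$, a chosen $\gamma\in T_4$ typically lies in $C(u_3)$, so $u_3$ does have a colour-$\gamma$ edge and one must argue that the maximal $(1,\gamma)$-path out of $u_3$ cannot reach $v$. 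I would close this by selecting $\gamma\in T_4$ and invoking Lemma~\ref{lemma06} to forbid the offending maximal bichromatic path from terminating at $v$, after which the colouring is acyclic and the contrapositive argument yields the proposition.
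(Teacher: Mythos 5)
Your proof is correct and is essentially the paper's own argument: the paper "proves" this proposition by the immediately preceding sentence, which asserts that under the additional assumption $3\notin S_u\setminus C(u_4)$ (equivalently $3\notin C(u_1)\cup C(u_2)$, exactly the bookkeeping identity you establish) the recoloring $(uu_1,uu_3,uv)\overset{\divideontimes}\to(3,1,T_4)$ succeeds, so the stated proposition is just its contrapositive. Your cycle-by-cycle verification --- including the one genuinely delicate case, the $(1,\gamma)$-cycle, which you correctly settle using $\gamma\in T_4\subseteq B_1$ together with $1\notin C(u_3)$ so that $u_3$ cannot lie on the maximal $(1,\gamma)$-path ending at $v$ --- is precisely the content the paper leaves implicit behind the $\overset{\divideontimes}\to$ symbol.
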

$1\in S_u$ or $3\in S_u\setminus C(u_4)$.

Assume that $d(u_1) = \Delta$ and $C(u_1)\setminus \{1\} = B_1 = T_3\cup T_4\cup \{4, 5\}$ or $B_1 = T_3\cup T_4\cup \{4, 5, 2\}$.
One sees that $C(v_1) = C(u_1)$, and it follows from Lemma~\ref{auvge2} that $H$ contains a $(5, C_{34})_{(u_3, v_1)}$-path.
If for some $\gamma\in [6, 7]$, $H$ contains no $(j, i)_{(u_1, u_i)}$-path for each $i\in \{2, 4\}$,
one sees that the same argument applies if $u u_1\to \gamma$, and thus $C(v_1)\setminus \{1\} = C(v_\gamma)\setminus \{\gamma\} = C(u_1)\setminus \{1\}$,
then $(u u_1, v v_\gamma, u v)\overset{\divideontimes}\to (\gamma, C_{34}, T_3)$.
Hence, we proceed with the following proposition, or otherwise we are done:
\begin{proposition}
\label{26264646}
If $d(u_1) = \Delta$, $C(u_1)\setminus \{1\} = B_1 = T_3\cup T_4\cup \{4, 5\}$ or $B_1 = T_3\cup T_4\cup \{4, 5, 2\}$, then for each $j\in [6, 7]$, $H$ contains a $(i, j)_{(u_1, u_i)}$-path for some $i\in \{2, 4\}$.
\end{proposition}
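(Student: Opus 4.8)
The plan is to prove the contrapositive by a recoloring of the edge $uu_1$: assuming that for some $\gamma\in[6,7]$ there is no $(\gamma,i)_{(u_1,u_i)}$-path for either $i\in\{2,4\}$, I will produce an acyclic edge $(\Delta+5)$-coloring of $G$, which is exactly the ``otherwise we are done'' alternative guarding the proposition. First I would unpack the hypothesis. Since $c(uu_1)=c(vv_1)=1$, \ref{prop3001} and the definition of $B_1$ give $B_1\subseteq C(u_1)\cap C(v_1)$; together with $C(u_1)\setminus\{1\}=B_1$ and $1\in C(v_1)$ this forces $C(u_1)\subseteq C(v_1)$, and as $d(u_1)=\Delta$ is the maximum degree while $d(v_1)\le\Delta$, I obtain $C(v_1)=C(u_1)$. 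I would also record that $C(u_1)\cap T_{uv}=T_3\cup T_4=T_{uv}\setminus C_{34}$, whence the set of colors of $T_{uv}$ missing at $u_1$ is exactly $T_1=C_{34}$; since $T_{uv}\subseteq B_1\cup B_3$ and $C_{34}\cap B_1=\emptyset$, every $\beta\in C_{34}$ lies in $B_3$, and Lemma~\ref{auvge2} then supplies the $(5,C_{34})_{(u_3,v_1)}$-path recorded in the text.

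Next I would justify that recoloring $uu_1\to\gamma$ is admissible. Because $\gamma\in[6,7]$ belongs neither to $B_1\cup\{2,4,5\}$ nor to $T_{uv}$, we have $\gamma,3\notin C(u_1)$, so no $(\gamma,3)_{(u_1,u_3)}$-path can begin at $u_1$; by the standing assumption there is likewise no $(\gamma,2)$- or $(\gamma,4)$-path starting at $u_1$. Hence the colors still present at $u$ after the swap cannot close a bichromatic cycle through $uu_1$, and $uu_1\to\gamma$ yields a valid acyclic edge $(\Delta+5)$-coloring $c'$ of $H$ in which color $1$ is freed at $u$, the common set becomes $A_{uv}=\{3,\gamma\}$, and $C(u_1)=B_1\cup\{\gamma\}$. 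The coloring $c'$ sits in the same configuration of the present case with $\gamma$ playing the former role of $1$, so rerunning the degree argument of the first paragraph for the common color $\gamma$ (witnessed at $v$ by $v_\gamma$) gives $C(v_\gamma)\setminus\{\gamma\}=C(u_1)\setminus\{1\}=B_1$, i.e.\ $C(v_\gamma)=\{\gamma\}\cup B_1$.

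Finally, armed with $C(v_\gamma)=\{\gamma\}\cup B_1$ and the $(5,C_{34})_{(u_3,v_1)}$-path, I would carry out the closing recoloring $(uu_1,vv_\gamma,uv)\overset{\divideontimes}{\to}(\gamma,C_{34},T_3)$: the swap $uu_1\to\gamma$ frees $1$ at $u$; recoloring $vv_\gamma$ to some $\beta\in C_{34}$ is proper because $\beta\notin C(v)\cup B_1$, and it creates no bichromatic cycle precisely because of the $(5,C_{34})_{(u_3,v_1)}$-path; and $uv\to T_3$ then completes an acyclic edge $(\Delta+5)$-coloring of $G$, the desired contradiction. The main obstacle I anticipate is the middle step: one must argue cleanly that $c'$ reproduces the hypotheses of the present case so that ``the same argument applies'' and $C(v_\gamma)=\{\gamma\}\cup B_1$ follows, and then verify in the last step that simultaneously placing a color of $C_{34}$ on $vv_\gamma$ and a color of $T_3$ on $uv$ closes no new bichromatic cycle through $u$, $v$, $u_3$, or $v_1$ — this is exactly where $C(v_\gamma)=\{\gamma\}\cup B_1$ and the $(5,C_{34})_{(u_3,v_1)}$-path must be invoked together.
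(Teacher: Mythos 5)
Your proposal matches the paper's own argument step for step: the same degree computation giving $C(v_1)=C(u_1)$, the same identification $T_1=C_{34}\subseteq B_3$ and the $(5, C_{34})_{(u_3, v_1)}$-path from Lemma~\ref{auvge2}, the same admissible swap $u u_1\to \gamma$ (justified by $3\notin C(u_1)$ plus the assumed absence of $(\gamma,2)$- and $(\gamma,4)$-paths), and the same closing recoloring $(u u_1, v v_\gamma, u v)\overset{\divideontimes}\to (\gamma, C_{34}, T_3)$. The ``same argument applies'' step that you flag as the main obstacle is treated with exactly the same brevity in the paper, so your write-up is faithful to, and no weaker than, the original.
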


We distinguish the following two scenarios:
\begin{itemize}
\parskip=0pt
\item
    For some $\alpha\in T_3$, $H$ contains no $(i, \alpha)_{(u_3, v_i)}$-path for each $i\in [6, 7]$.
    It follows from Corollary~\ref{auv2}(4) that $5\in B_1\subseteq C(u_1)\cap C(v_1)$,
    and $H$ contains a $(5, i)_{(u_3, u_i)}$-path for some $i\in \{2, 4\}$.

    \quad When $H$ contains no $(2, \beta)_{(u_2, u_4)}$-path for some $\beta\in T_4$.
    It follows from Corollary~\ref{auv2}(4) that $4\in B_1\subseteq C(u_1)\cap C(v_1)$,
    and if $4\not\in C(u_3)$, then $H$ contains a $(2, 4)_{(u_2, u_3)}$-path.
    \begin{itemize}
    \parskip=0pt
    \item
         $1\not\in C(u_3)\cup C(u_4)$ and $2\not\in C(u_4)$.
         It follows that $2\in C(u_1)$, $T_1\subseteq C(u_2)$, $c_{34}\ge t_1\ge 2$.
         One sees that $|\{1, 2, 4, 5\}| + |\{2\}| + |\{3, 5\}| + |\{3, 4\}| + 2t_{u v} = 2(\Delta - 2) + 9 = 2\Delta + 5$.
         \begin{itemize}
         \parskip=0pt
         \item
              $T_3\subseteq C(u_2)$. One sees that $t_0\ge t_1 + t_3\ge 2 + 1 = 3$.
              First assume $2, 4\in C(u_3)$. Since $\sum_{x\in N(u)\setminus \{v\}}d(x)\ge 2\Delta + 5 + |\{2, 4, 5, 1/3\}| + t_0\ge 2\Delta + 9 + t_1 + t_3\ge 2\Delta + 9 + 2 + 2 = 2\Delta + 13$,
              it follows that $t_1 = t_3 = 2$, $d(u_1) = \Delta = 7$, $C(u_1) = \{1, 5, 4, 2\}\cup T_3\cup T_4$, and $b_3 = 5$, $5\not\in C(u_2)$, $6, 7\not\in S_u$.
              If $2\not\in B_1$, then $(u u_2, v u_3, u v)\overset{\divideontimes}\to (5, \alpha, 2)$.
              Otherwise, $2\in B_1$.
              Hence, by \ref{26264646}, we are done.
              Next, assume that $|C(u_3)\cap \{2, 4\}| = 1$.
              It follows from Corollary~\ref{auv2}(5.1) that $6, 7\in S_u$.
              Since $\sum_{x\in N(u)\setminus \{v\}}d(x)\ge 2\Delta + 5 + |\{5, 1/3, 6, 7, a_3\}| + t_0\ge 2\Delta + 10 + t_1 + t_3\ge 2\Delta + 10 + 2 + 1 = 2\Delta + 13$,
              it follows that $t_1 = 2$, $t_3 = 1$,
              and since $4\in C(u_2)\cup C(u_3)$, we have $a_3 = 4$, $2\not\in C(u_3)\cup C(u_4)$,
              and $H$ contains a $(4, 5)_{(u_3, u_4)}$-path, $5\in C(u_4)\setminus C(u_2)$.
              Then $(u u_2, v u_3)\to (5, \alpha)$,
              and $u v \overset{\divideontimes}\to 2$, or $(uu_3, uv)\overset{\divideontimes}\to (2, C_{34})$.
         \item
              $T_3\setminus C(u_2)\ne \emptyset$.
              Then $a_3 = 4$ and $H$ contains a $(4, T_3)_{(u_3, u_4)}$-path.
              It follows from Corollary~\ref{auv2}(5.2) and (3) that $2\in B_1\cap C(u_3)$,
              from Corollary~\ref{auv2}(5.1) that $6, 7\in S_u$.
              Since $\sum_{x\in N(u)\setminus \{v\}}d(x)\ge 2\Delta + 5 + |\{5, 1/3, 6, 7, 2, 4\}| + t_0\ge 2\Delta + 11 + t_1 \ge 2\Delta + 11 + 2 = 2\Delta + 13$,
              it follows that $c_{34} = t_1 = 2$, $d(u_1) = \Delta = 7$, $C(u_1) = \{1, 5, 4, 2\}\cup T_3\cup T_4$, and $W_3 = \{3, 5, 2, 4\}$, $W_4 = \{4, 3, b_3\}$.
              Hence, it follows from \ref{auv2}(5.1) that $H$ contains a $(2, i)_{(u_2, u_3)}$-path for each $i\in [6, 7]\setminus \{b_3\}$.
              Thus, by \ref{26264646}, we are done.
         \end{itemize}
    \item
         $1\in C(u_3)\cup C(u_4)$ or $2\in C(u_4)$.
         One sees that $|\{1, 4, 5\}| + |\{2\}| + |\{3, 5\}| + |\{3, 4\}| + 2t_{u v} = 2(\Delta - 2) + 8 = 2\Delta + 4$.
         When $2, 4\in C(u_3)$, $w_3 + w_4 \ge |\{2, 4, 3, 5\}| + |\{4, 3, b_3\}| + |\{1/2\}| = 8$,
         it follows that $W_4\subseteq \{4, 3, 1/2, b_3\}$, $c_{34} = t_1 = 1$,
         and $d(u_1) = \Delta = 7$, $C(u_1) = \{1, 5, 4\}\cup T_3\cup T_4$.
         One sees that there exists an $i\in [6, 7]\setminus C(u_4)$.
         Thus, by \ref{26264646}, we are done.
         Hence, $C(u_3)\cap \{2, 4\} = a_3$.
         It follows from Corollary~\ref{auv2}(5.1) that $6, 7\in S_u$.
         \begin{itemize}
         \parskip=0pt
         \item
             $a_3 = 2$. Then $H$ contains a $(2, T_3\cup \{4, 5\})_{(u_2, u_3)}$-path.
             If $3\not\in C(u_1)\cup C(u_2)$,
             then $(u u_2, u u_3, u v)\overset{\divideontimes}\to (3, T_3, T_1)$.
             Otherwise, $3\in C(u_1)\cup C(u_2)$.
             First, assume that $T_1\subseteq C(u_2)$.
             Since $\sum_{x\in N(u)\setminus \{v\}}d(x)\ge 2\Delta + 4 + |\{4, 5, 2, 6, 7, 3, 1/2\}| + t_0\ge 2\Delta + 11 + t_3 + t_1\ge 2\Delta + 11 + 1 + 1 = 2\Delta + 13$,
             it follows that $t_3 = 1$, $c_{34} = t_1 = 1$,
             $d(u_1) = \Delta = 7$, $C(u_1) = \{1, 5, 4\}\cup T_3\cup T_4$.
             It follows from \ref{26264646} that $H$ contains a $(4, i)_{(u_1, u_4)}$-path for each $i\in [6, 7]$ and $W_4 = \{3, 4, 6, 7, 1/2\}$.
             Thus, by \ref{314Cv2T4}, we are done.
             Next, assume that $T_1\setminus C(u_2)\ne \emptyset$.
             It follows from Corollary~\ref{auv2}(5.2) that $2\in C(u_1)\cup C(u_4)$ and $1\in S_u$.
		Since $\sum_{x\in N(u)\setminus \{v\}}d(x)\ge 2\Delta + 4 + |\{4, 5, 2, 6, 7, 3, 1, 2\}| + t_0\ge 2\Delta + 11 + t_3\ge 2\Delta + 12 + 1 = 2\Delta + 13$,
             it follows that $t_3 = 1$, $W_3 = \{3, 5, 2\}$, $1/2\in C(u_4)$, $b_3\in [6, 7]$ and mult$_{S_u}(1) = 1$, mult$_{S_u}(i) = 2$, $i\in [2, 3]$.
             One sees that if $1\not\in C(u_4)$ and $H$ contains no $(1, 2)_{(u_2, u_4)}$-path,
             then it follows from Corollary~\ref{auv2d(u)5}(7.2) that $2\in C(u_1)$ and $T_1\subseteq C(u_2)$.
             Hence, assume that $1\in C(u_4)$ or $H$ contains a $(1, 2)_{(u_2, u_4)}$-path.
             One sees clearly that $H$ contains no $(1, 2)_{(u_2, u_3)}$-path.
             If $2, 3\not\in C(u_1)$, then $(u u_1, u u_3, u v)\overset{\divideontimes}\to (3, 1, T_4)$.
             Otherwise, $2/4\in C(u_1)$.
             It follows that $c_{34} = t_1 = 2$, $W_1 = \{1, 5, 4, 2/3\}$, $W_4 = \{3, 4, 6, 2\}$ and $6\not\in C(u_2)$.
             It follows from \ref{314Cv2T4} that $H$ contains a $(4, 6)_{(u_4, v_6)}$-path.
             Then $(u u_1, u v)\overset{\divideontimes}\to (6, T_4)$.
         \item
             $a_3 = 4$. Then $H$ contains a $(4, T_3\cup \{5\})_{(u_3, u_4)}$-path.
             First, assume $2\in C(u_1)\cap C(u_4)$.
             It follows that $c_{34} = t_1 = 2$, $W_1 = \{1, 4, 5, 2\}$, $W_4 = \{4, 3, 5, 2\}$,
             and $6, 7\in C(u_2)$.
             One sees from Corollary~\ref{auv2}(5.1) that $H$ contains a $(2, 6)_{(u_2, u_4)}$-path.
             If $3\not\in C(u_2)$, then $(u u_1, u u_3, u v)\to (3, 1, T_4)$.
             Otherwise, $3\in C(u_2)$.
             Since $\sum_{x\in N(u)\setminus \{v\}}d(x)\ge 2\Delta + 4 + |\{2, 6, 7, 3, 4, 2, 5\}| + t_0\ge 2\Delta + 11 + t_0$ and $t_1 + t_3\ge 2 + 1 = 3$,
             it follows that $(T_3\cup T_1)\setminus C(u_2)\ne \emptyset$.
             Hence, $2\in B_1$.
             Thus, by \ref{26264646}, we are done.
             Next, assume that $2\not\in C(u_1)\cap C(u_4)$.
             It follows that $T_1\cup T_3\subseteq C(u_2)$.
             If $5\not\in C(u_2)$, then $(u u_2, v u_3)\to (5, \alpha)$,
             and by Lemma~\ref{auvge2}(2), we are done.
             Otherwise, $5\in C(u_2)$.
             If $H$ contains no $(3, j)_{(u_2, u_3)}$-path for some $j\in T_4\cap T_2$,
             then $u u_2\to j$ and by Lemma~\ref{auvge2}(2), we are done.
             Otherwise, $T_4\subseteq C(u_2)$ or $3\in T_4$.
             One sees that $2\Delta + 4 + |\{5, 6, 7, 4, 5, 1/2\}| + t_0 = 2\Delta + 10 + t_0$ and $t_3 + t_4\ge 3$.
             If $T_4\subseteq C(u_2)$, then $\sum_{x\in N(u)\setminus \{v\}}d(x)\ge 2\Delta + 10 + t_0\ge 2\Delta + 10 + t_1 + t_3 + t_4\ge 2\Delta + 10 + 1 + 3 = 2\Delta + 14$.
             Otherwise, $T_4\setminus C(u_2)\ne \emptyset$ and $3\in C(u_2)$.
             Hence, $\sum_{x\in N(u)\setminus \{v\}}d(x)\ge 2\Delta + 10 + t_0 + |\{3\}|\ge 2\Delta + 10 + t_1 + t_3 \ge 2\Delta + 11 + 1 + 1 = 2\Delta + 13$.
             It follows that $t_1 = t_3 = 1$, $d(u_1) = \Delta = 7$, $C(u_1) = \{1, 5, 4\}\cup T_3\cup T_4 = B_1\cup \{1\}$, and $1/2\in C(u_4)$.
             One sees that $[6, 7]\setminus C(u_4)\ne \emptyset$.
             Thus, by \ref{26264646}, we are done.
         \end{itemize}
    \end{itemize}
    \quad In the other case, $2\in C(u_4)$ and $H$ contains a $(2, T_4)_{(u_2, u_4)}$-path.
         It follows that $T_4\subseteq C(u_2)$ and $t_0\ge t_4\ge 2$.
         If $2, 4\in C(u_1)\cap C(u_3)$, then $c_{34}\ge t_1\ge 2$ and $d(u_3) + d(u_4)\ge w_3 + w_4 + t_{uv} + c_{u v}\ge |[2, 5]| + |\{2, 3, 4, b_3\}| + \Delta - 2 + c_{34} = \Delta + 6 + c_{34}\ge \Delta + 6 + 2 = \Delta + 8$.
         Otherwise, $\{2, 4\}\cap (C(u_1)\cap C(u_3))\ne \emptyset$.
         It follows from Corollary~\ref{auv2}(5.1) that $6, 7\in S_u$.
         One sees that $|\{1, 5, a_1\}| + |\{2\}| + |\{3, 5, 2/4\}| + |\{2, 3, 4\}| + |\{5, 6, 7, 1/3\}| + 2t_{u v} = 2\Delta + 10$ and $t_1\ge 1$, $t_3\ge 1$.
         It follows that $(T_3\cup T_1)\setminus C(u_2)\ne \emptyset$ and from Corollary~\ref{auv2}(5.2) that $2\in C(u_1)\cup C(u_3)$, say $2\in C(u_{i_1})$.
         Then $4\in C(u_{i_1})$ or $T_{i_1}\subseteq C(u_2)$.
         Hence, $\sum_{x\in N(u)\setminus \{v\}}d(x)\ge 2\Delta + 10 + t_4 + |4/T_{i_1}|\ge 2\Delta + 13$.
         It follows that $t_4 = 2$, $W_4 = \{2, 3, 4, b_3\}$, $\{2, 4\}\cap C(u_{i_2}) = \{4\}$,
         mult$_{S_u}(1) +  $mult$_{C(u_2)\cup C(u_1)}(3) = 1$.
         \begin{itemize}
         \parskip=0pt
         \item
              $2\in C(u_3)$ and $\{2, 4\}\cap C(u_1) = 4$.
              It follows from Corollary~\ref{auv2}(5.1) that $1\in S_u$ and then $3\not\in C(u_1)\cup C(u_2)$.
              If $H$ contains no $(4, j)_{(u_3, u_4)}$-path,
              then $(u u_2, u u_3, u v)\overset{\divideontimes}\to (3, j, C_{34})$.
              otherwise, $4\in C(u_3)$ and $H$ contains a $(4, T_3)_{(u_3, u_4)}$-path.
              It follows that $c_{34} = t_1 = 1$, $W_1 = \{1, 4, 5\}$, $W_3 = [3, 5]$, $W_4 = \{2, 3, 4, b_3\}$.
              One sees that $H$ contains a $(1, 2)_{(u_2, u_4)}$-path.
              Then $(u u_1, u u_3, u v)\overset{\divideontimes}\to (3, 1, T_4)$.
         \item
              $2\in C(u_1)$ and $\{2, 4\}\cap C(u_3) = 4$.
              One sees from Corollary~\ref{auv2}(7.1) that it $4\not\in C(u_1)$, then $1\in S_u$,
              and from Corollary~\ref{auv2}(5.1) that if $T_1\setminus C(u_2)\ne \emptyset$, then $1\in S_u$.
              Hence, $1\in S_u$, and $3\not\in S_u\setminus C(u_4)$.
              It follows from \ref{311243} that $1\in C(u_3)$ or $H$ contains  a $(1, 4)_{(u_3, u_4)}$-path.
              One sees that $1\not\in C(u_4)$, or $H$ contains a $(1, 4)_{(u_3, u_4)}$-path.
              It follows that $c_{34} = t_1 = 1$ and $4\not\in C(u_1)$.
              Then $(u u_2, u u_1, u v)\overset{\divideontimes}\to (1, T_1, T_3)$.
         \end{itemize}
\item
     For each $j\in T_3$, $H$ contains a $(6/7, j)_{(u_3, v)}$-path.
     Assume w.l.o.g. $H$ contains a $(7, 8)_{(u_3, v_6)}$-path, where $8\in T_3$.
     If $7\not\in C(u_1)$ and $H$ contains no $(7, i)_{(u_1, u_i)}$-path for each $i\in [2, 4]$,
     then $(u u_1, u v)\to (7, 8)$.
     Otherwise, $7\in C(u_1)$ or $H$ contains a $(7, i)_{(u_1, u_i)}$-path for some $i\in [2, 4]$.

     \quad One sees that if $\{4\}\cup T_3\cup T_4\subseteq C(v_1)$ and $C_{u v} = \{\mu\}$,
     then it follows from Lemma~\ref{auvge2}(1.2) that $C(v_1) = \{1, 4, \mu/5/6/7\}\cup T_3\cup T_4$, i.e., $3\not\in C(v_1)$.
     One sees from \ref{315234} that $5\in S_u\setminus C(u_3)$,
     and if $C(u_3)\cap \{2, 4\} = \{a_3\}$, then it follows from Corollary~\ref{auv2}(5.1) that $6\in S_u$.
     \begin{itemize}
     \parskip=0pt
     \item
          $W_3 = 5$, or $c_{34} = 2$.
          Since $w_3 + w_4 + c_{34}\le 9$,
          it follows that $\Delta = 7$, $d(u_1) + d(u_2)\le 13$, $W_3 = \{3, 4, b_3\}$.
          Then $4\in B_1$ and it follows from \ref{314Cv2T4} that $H$ contains a $(b_3, 4)_{(u_4, v)}$-path.
          One sees that if $c_{34} = 2$, $H$ contains no $(1, 3)_{(u_3, u_4)}$-path,
          and if $W_3 = 5$, one sees that $t_3 + t_4 = \Delta - 3$, then $3\not\in C(v_1)$.
          Hence, it follows from Corollary~\ref{auv2}(7.2) that $2\in C(u_1)$, $H$ contains a $(2, T_1)_{(u_1, u_2)}$-path and $C_{34}\subseteq C(u_1)\cup C(u_2)$.
          \begin{itemize}
          \parskip=0pt
          \item
               $W_3 = 5$. Then $C(u_1) = \{1, 2, 4\}\cup T_3\cup T_4$ and $T_1\cup \{5, 7\}\subseteq C(u_2)$.
               One sees from $2\not\in C(u_4)\cup C(v_1)$ that $2\not\in B_1\cup B_3$.
               Hence, it follows from Corollary~\ref{auv2}(5.2) and (3) that $T_3\subseteq C(u_2)$.
               Thus, $d(u_1) = d(u_2) = \Delta$, a contradiction.
          \item
              $c_{34} = 2$.
              Then $5\in S_u\setminus C(u_3)$ and $6\in S_u$.
              One sees that if $a_3 = 2$, then $T_3\subseteq C(u_2)$,
              and if $a_3 = 4$, then it follows from Corollary~\ref{auv2}(5.2) that $T_3\subseteq C(u_2)$.
              Hence, $\sum_{x\in N(u)\setminus \{v\}}d(x)\ge |\{1, 2, 4\}| + |\{2\}| + |\{a_3, 7, 3, 5\}| + |\{4, 3\}| + |\{5, 6, 1/3\}| + 2t_{u v} + t_0 \ge 2\Delta + 9 + t_3 + c_{34}\ge 2\Delta + 9 + 2 + 2 = 2\Delta + 13$.
It follows that $7\not\in S_u\setminus C(u_3)$, and then $3\in C(u_1)$, $1\not\in S_u$.
              Then $(uu_1, uu_3, uv)\overset{\divideontimes}\to (1, 7, 8)$.
          \end{itemize}

     \item
          In the other case, $W_3 = \{3, 5, a_3, 7\}$, $T_1 = C_{34}$ and $c_{34} = t_1 = 1$.
          One sees that $5\in S_u\setminus C(u_3)$ and $6\in S_u$.
          If $7\not\in S_u\setminus C(u_3)$ and $H$ contains no $(1, a_3)_{(u_3, u_{a_3})}$-path,
          then $(u u_1, u u_3, u v)\overset{\divideontimes}\to (7, 1, 8)$.
          Otherwise, we proceed with the following proposition:
          \begin{proposition}
          \label{no71a31a3}
          If mult$_{S_u}(7) = 1$, then $H$ contains a $(1, a_3)_{(u_3, u_{a_3})}$-path and $1\in C(u_{a_3})$.
          \end{proposition}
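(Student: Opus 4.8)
The plan is to read Proposition~\ref{no71a31a3} as the contrapositive of the coloring step stated immediately before it, so that essentially no new work is needed beyond reconciling the two ways of recording the same hypothesis. First I would note that in the current situation $W_3=\{3,5,a_3,7\}$, so $7\in C(u_3)$ with $7\neq 3=c(uu_3)$; hence $7$ contributes exactly one occurrence to $S_u=\biguplus_{i\in[1,4]}(C(u_i)\setminus\{c(uu_i)\})$, coming from $u_3$. Consequently the hypothesis $\mathrm{mult}_{S_u}(7)=1$ is literally the statement $7\notin S_u\setminus C(u_3)$, which is the first of the two conditions appearing in the sentence preceding the proposition. I would also record the two structural facts that make the recoloring legal: since $C(u_3)\subseteq\{3,5,a_3,7\}\cup T_{uv}$ and $a_3\in\{2,4\}$, we have $1\notin C(u_3)$; and $C(u_3)\cap\{2,4\}=\{a_3\}$, so the other color $a_3'\in\{2,4\}\setminus\{a_3\}$ is absent from $C(u_3)$.

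Next I would argue by contradiction: assume $\mathrm{mult}_{S_u}(7)=1$ but $H$ contains no $(1,a_3)_{(u_3,u_{a_3})}$-path. Then both conditions of the preceding step hold, and I claim the recoloring $(uu_1,uu_3,uv)\overset{\divideontimes}\to(7,1,8)$ is a valid acyclic edge $(\Delta+5)$-coloring of $G$, contradicting that we are still inside the inductive step. Properness is immediate: $7\notin C(u)$ and, as the unique external occurrence of $7$ sits at $u_3$, we have $7\notin C(u_1)\setminus\{1\}$; moreover $1\notin C(u_3)$ as recorded above, and $8\in T_{uv}$ is absent from $C(u)\cup C(v)$. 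For acyclicity I would isolate the three families of bichromatic cycles that the simultaneous changes could create and dispose of them in turn. The edge $uv$ carries the fresh color $8$, present at neither $u$ nor $v$, so no $\{8,x\}$-cycle can pass through it (in particular the length-three candidate through $v$ fails because $c(vu_3)=5\neq1$). Every $\{7,x\}$-cycle forced by $uu_1\to7$, and the $\{1,7\}$-cycle through the pair $uu_1,uu_3$, would require a second edge of color $7$ at some $u_i$ with $i\neq3$, which $\mathrm{mult}_{S_u}(7)=1$ forbids. A $\{1,a_3'\}$-cycle through $uu_3$ is excluded since $a_3'\notin C(u_3)$. The only surviving danger is a $\{1,a_3\}$-cycle through $uu_3$, namely $uu_3$ (now $1$), the color-$a_3$ edge $uu_{a_3}$, and a $(1,a_3)$-path joining $u_3$ to $u_{a_3}$; this is exactly what the assumed nonexistence of a $(1,a_3)_{(u_3,u_{a_3})}$-path rules out, via Lemma~\ref{lemma06}.

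Therefore the assumption fails, so $H$ must contain a $(1,a_3)_{(u_3,u_{a_3})}$-path. To obtain the second conclusion I would observe that along such a path the unique color-$a_3$ edge incident to $u_{a_3}$ is $uu_{a_3}$, which is not an internal edge of the path; hence the path enters $u_{a_3}$ through an edge colored $1$, giving $1\in C(u_{a_3})$. The main obstacle in writing this out cleanly is the acyclicity bookkeeping for the triple recoloring: one must enumerate the potential bichromatic cycles $\{8,\cdot\}$, $\{7,\cdot\}$, $\{1,a_3'\}$ and $\{1,a_3\}$, and check that the fresh color $8$, the uniqueness $\mathrm{mult}_{S_u}(7)=1$, and the identity $C(u_3)\cap\{2,4\}=\{a_3\}$ eliminate every family except the last, which is precisely the $(1,a_3)_{(u_3,u_{a_3})}$-path the proposition records. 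Everything else is routine verification of the $\overset{\divideontimes}\to$ operation.
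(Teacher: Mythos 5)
Your reading is exactly the paper's: Proposition~\ref{no71a31a3} is nothing but the contrapositive of the recoloring $(uu_1,uu_3,uv)\overset{\divideontimes}\to(7,1,8)$ asserted in the sentence immediately before it, and your translation of the hypothesis $\mathrm{mult}_{S_u}(7)=1$ into $7\notin S_u\setminus C(u_3)$ (the one occurrence of $7$ coming from $7\in W_3=\{3,5,a_3,7\}$) is the only reconciliation needed; the paper supplies no further justification, so in structure your proof coincides with it.

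One step of your acyclicity bookkeeping is, however, justified by an insufficient reason. You dismiss all $\{8,x\}$-cycles through $uv$ because ``$8$ is present at neither $u$ nor $v$''; that alone does not rule them out, since such a cycle only needs the two cycle-edges at $u$ and at $v$ to carry the second colour $x$, and $x=7$ is available at both ends after the recoloring ($uu_1\to 7$ and $c(vv_7)=7$), while in this configuration $8\in T_3\subseteq B_1\subseteq C(u_1)$, so $u_1$ genuinely has an $8$-coloured edge. The candidate $(7,8)$-cycle through $uv$, $uu_1$ and $vv_7$ is still excluded, but only because this sub-case is entered with a $(7,8)_{(u_3,v_7)}$-path in hand: since $8\notin C(u_3)$ and $8\notin C(v)$, the $(7,8)$-component containing $v_7$ is a path whose endpoints are $u_3$ and $v$, and $u_1$, which has no $7$-edge by $\mathrm{mult}_{S_u}(7)=1$, would have to be an endpoint of any component it meets, so it cannot lie on that path and no $(7,8)$-path joins $u_1$ to $v_7$. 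You should replace the blanket statement by this argument (or an equivalent appeal to Lemma~\ref{lemma06}); with that repair the verification is complete.
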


         \quad When $H$ contains no $(2, \beta)_{(u_2, u_4)}$-path for some $\beta\in T_4$.
         It follows from Corollary~\ref{auv2}(4) that $4\in B_1\subseteq C(u_1)\cap C(v_1)$,
         and if $4\not\in C(u_3)$, then $H$ contains a $(2, 4)_{(u_2, u_3)}$-path.
         One sees that if $a_3 = 2$, or if $a_3 = 4$, since $2\not\in B_1\cup B_3$,
         then it follows from Corollary~\ref{auv2}(1.2) or Corollary~\ref{auv2}(5.2) and (3) that $T_3\subseteq C(u_2)$.
         One sees that $7\in S_u\setminus C(u_3)$ or $3\in C(u_1)$,
         and it follows from Corollary~\ref{auv2d(u)5}(7.1) that $2\in C(u_1)$ or $1\in S_u$.
         \begin{itemize}
         \parskip=0pt
         \item
              $2\in C(u_3)\cap C(u_4)$. Then $4\in C(u_2)$.
              Then $\sum_{x\in N(u)\setminus \{v\}}d(x)\ge |\{1, 4\}| + |\{2, 4\}| + |\{2, 7, 3, 5\}| + |\{4, 3, 2\}| + |\{5, 6, 7/3, 1/2\}| + 2t_{u v} + t_0\ge 2\Delta + 11 + t_0\ge 2\Delta + 11 + t_3 \ge 2\Delta + 13$.
              It follows that $T_1\setminus C(u_2)\ne \emptyset$, mult$_{S_u}(3) + $mult$_{S_u}(7) = 3$,
              and from Corollary~\ref{auv2}(5.1) that $H$ contains a $(1, 2)_{(u_2, u_4)}$-path.
              Further, by \ref{311243} that $3\in C(u_1)\cup C(u_2)$.
              Thus, $7\not\in S_u\setminus C(u_3)$, and by \ref{no71a31a3}, we are done.
         \item
              $2\not\in C(u_3)\cap C(u_4)$.
              It follows from Corollary~\ref{auv2} (5.2) and (4) that $C_{34}\subseteq C(u_2)$,
              and if $3\not\in C(u_2)$, then $T_4\subseteq C(u_2)$.
              Then $\sum_{x\in N(u)\setminus \{v\}}d(x)\ge |\{1, 4\}| + |\{2, T_4/3\}| + |\{a_3, 7, 3, 5\}| + |\{4, 3\}| + |\{5, 6, 7/3, 1/2\}| + 2t_{u v} + t_0\ge 2\Delta + 10 + t_3 + c_{34}\ge 2\Delta + 10 + 2 + 1 = 2\Delta + 13$.
              It follows that $a_3 = 4$, $2\not\in C(u_4)$, $4\not\in C(u_2)$,
              either $2\in C(u_1)$, or $1\in S_u$,
              and either $3\in C(u_1)$ or $7\in S_u\setminus C(u_1)$.
              If $2\in C(u_1)$, one sees that $1\not\in S_u$,
              then $(u u_2, u u_1, u v)\to (1, \{5, 7\}\setminus \{\rho\}, 8)$.
              Otherwise, $2\not\in C(u_1)$ and then $1\in S_u$.
              It follows from Corollary~\ref{auv2d(u)5} (7.1) that $1\in C(u_4)$.
              If $H$ contains no $(3, j)_{(u_1, u_3)}$-path for some $j\in \{5, 7\}\setminus C(u_1)$,
              then $(u u_2, u u_1, u v)\to (1, j, 8)$.
              Otherwise, $3\in C(u_1)$ and $H$ contains a $(3, j)_{(u_1, u_3)}$-path for each $j\in \{5, 7\}\setminus C(u_1)$, i.e., $b_3 = 5$.
              Then $(u u_3, u v)\overset{\divideontimes}\to (6, C_{34})$,
              or $(u u_4, u u_3, uv)\overset{\divideontimes}\to (6, T_3, C_{34})$\footnote{Or since $6\not\in C(u_1)\cup C(u_3)\cup C(u_4)$, by Corollary~\ref{auv2} (5.1), we are done. }.
         \end{itemize}
         \quad In the other case,
         $2\in C(u_4)$ and $H$ contains a $(2, T_4)_{(u_2, u_4)}$-path.
         It follows that $W_3 = \{3, 5, a_3, 7\}$, $W_4 = \{4, 3, 2, b_3\}$,
         $T_4\subseteq C(u_2)$ and $t_0\ge t_4\ge 2$.
         One sees that $|\{1, a_1\}| + |\{2\}| + |\{3, 5, 7, a_3\}| + |\{4, 3, 2\}| + |\{5, 6\}| + 2t_{u v} = 2\Delta - 4 + 12 = 2\Delta + 8$.
         \begin{itemize}
         \parskip=0pt
         \item
              $T_3\subseteq C(u_2)$.
              Since $\sum_{x\in N(u)\setminus \{v\}}d(x)\ge 2\Delta + 8 + |\{1/4\}| + t_0\ge 2\Delta + 9 + t_3 + t_4 = 2\Delta + 13$,
              we have $7\not\in S_u\setminus C(u_3)$ and then $3\in C(u_1)$, $1\not\in S_u$.
              Thus, by \ref{no71a31a3}, we are done.
         \item
              $T_3\setminus C(u_2)\ne \emptyset$.
              It follows from Corollary~\ref{auv2} (1.2) that $a_3 = 4$,
              and from Corollary~\ref{auv2}(5.2) that $2\in C(u_1)$.
              One sees clearly that $H$ contains no $(1, 4)_{(u_3, u_4)}$-path.
              It follows from~\ref{311243} that $3\in C(u_1)\cup C(u_2)$.
              One sees from Corollary~\ref{auv2d(u)5}(7.1) that $4\in C(u_1)$ or $1\in S_u$,
              and from Corollary~\ref{auv2} (1.2) and (5.1) that if $C_{34}\setminus C(u_2)\ne \emptyset$,
              then $4\in C(u_1)$ and $1\in S_u$.
              Since $\sum_{x\in N(u)\setminus \{v\}}d(x)\ge 2\Delta + 8 + |\{3, 1/4\}| + t_0\ge 2\Delta + 10 + t_4 + c_{34} = 2\Delta + 13$ if $C_{34}\subseteq C(u_2)$,
              or else, $\sum_{x\in N(u)\setminus \{v\}}d(x)\ge 2\Delta + 8 + |\{3, 1, 4\}| + t_0\ge 2\Delta + 11 + t_4 = 2\Delta + 13$,
              we have $7\not\in S_u\setminus C(u_3)$.
              Thus, by \ref{no71a31a3}, we are done.
         \end{itemize}
     \end{itemize}
\end{itemize}

(2.4.2.) $c(v u_4) = 1$.
Since $T_{u v}\subseteq C(u_3)\cup C(u_4)$,
let $C(u_3) = W_3\cup T_4\cup C_{3 4}$, $C(u_4) = W_4\cup T_3\cup C_{3 4}$.

If $C(u_3) = \{3, 5\}\cup T_{u v}$, then $(v u_3, u u_3)\to (4, 5)$ reduces the proof to (2.2.1.).
If $C(u_4) = \{1, 4\}\cup T_{u v}$, then $(v u_4, u u_4)\to (4, 5)$ reduces the proof to (2.1.1.).
Otherwise, $T_3\ne \emptyset$, $T_4\ne \emptyset$,
and there exists a $a_3\in \{2, 4\}\cap C(u_3)$, $b_1\in [5, 7]\cap C(u_4)$.
Assume w.l.o.g. $8\in T_3$.

Assume that $1\not\in C(u_3)$ and $H$ contains no $(1, 2)_{(u_2, u_3)}$-path.
If $5\not\in C(u_1)$ and $H$ contains no $(5, i)_{(u_1, u_i)}$-path for each $i\in \{2, 4\}$,
then $(u u_1, u u_3, u v)\overset{\divideontimes}\to (5, 1, T_3)$.
If for each $j\in [6, 7])\setminus C(u_1)$, $H$ contains no $(j, i)_{(u_1, u_i)}$-path for each $i\in \{2, 4\}$,
then $(u u_1, u u_3)\to (j, 1)$ reduces the proof to (2.4.1.).
One sees from Corollary~\ref{auv2d(u)5}(2.1) that $3\in C(u_1)$ or $H$ contains a $(3, i)_{(u_1, u_i)}$-path for some $i\in \{2, 4\}$.
Hence, we proceed with the following proposition:
\begin{proposition}
\label{131Cu33Cu1}
If $1\not\in C(u_3)$ and $H$ contains no $(1, 2)_{(u_2, u_3)}$-path,
then for each $j\in \{3, 5, 6, 7\}$,
$j\in C(u_1)$ or $H$ contains a $(j, i)_{(u_1, u_i)}$-path for some $i\in \{2, 4\}$.
\end{proposition}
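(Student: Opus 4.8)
The plan is to prove the statement one colour $j$ at a time, in each case by its contrapositive: I assume $j\notin C(u_1)$ and that $H$ contains no $(j,i)_{(u_1,u_i)}$-path for either $i\in\{2,4\}$, and then I produce a recolouring of the edges at $u$ that either completes to an acyclic edge $(\Delta+5)$-colouring of $G$ or lands in a configuration already disposed of, contradicting minimality. The move to attempt is always $uu_1\to j$; since colour $3$ sits on $uu_3$ and will have to leave $u$, the companion move $uu_3\to 1$ is needed, and this is exactly where the two standing hypotheses enter. The assumption $1\notin C(u_3)$ makes $uu_3\to 1$ proper at $u_3$, while the absence of a $(1,2)_{(u_2,u_3)}$-path kills the $(1,2)$-cycle through the recoloured $uu_3$; recall also that in this subcase $u_3,u_4$ are the two triangle vertices with $c(vu_3)=5$ and $c(vu_4)=1$, and that $4\notin C(v)$.

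First I would treat $j=5$. After $uu_1\to 5$ and $uu_3\to 1$ the edges at $u$ carry $5,2,1,4$, and for acyclicity I would run through the colour pairs meeting at $u$: a $(5,2)$- or $(5,4)$-cycle would yield a $(5,i)_{(u_1,u_i)}$-path ($i\in\{2,4\}$), excluded by assumption; a $(1,2)$-cycle would yield a $(1,2)_{(u_2,u_3)}$-path, excluded by hypothesis; the $(1,4)$-cycle is impossible because the unique colour-$1$ edge at $u_4$ is $vu_4$ and $v$ carries no colour $4$ (so no alternating path reaches $u_4$); and the $(1,5)$-type cycle must be controlled through the triangle edges $vu_3,vu_4$ via Lemma~\ref{lemma06}. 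Colouring $uv$ with a colour of $T_3=T_{uv}\setminus C(u_3)$ (nonempty here) then finishes $G$, which is the move $(uu_1,uu_3,uv)\overset{\divideontimes}\to(5,1,T_3)$. For $j\in\{6,7\}$ the same pair of recolourings passes the identical check, and the resulting acyclic $(\Delta+5)$-colouring of $H$ has $a_{uv}=2$ and falls under the case already settled in (2.4.1.); thus $(uu_1,uu_3)\to(j,1)$ reduces us to a finished case.

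The colour $j=3$ is the genuinely different one, since $uu_1\to 3$ cannot be accompanied by a free colour $3$ at $u$. Instead I would swap, setting $uu_1\to 3$ and $uu_3\to 1$. Under the supposition $3\notin C(u_1)$ together with the absence of $(3,2)_{(u_1,u_2)}$- and $(3,4)_{(u_1,u_4)}$-paths, and granting that the swap stays acyclic, it produces a colouring $c'$ of $H$ with $C(u)\cap C(v)=\{1,3\}$ realised by $(uu_3,uu_1)_{c'}=(1,3)$, i.e.\ of precisely the form required by Corollary~\ref{auv2d(u)5}(2.1) with $a=3\neq 1$ and $b=1\neq 3$. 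That corollary then forces $T_{uv}\subseteq B_1\cap B_3$ with $d(v_1),d(v_3)\ge t_{uv}+1$, which is incompatible with the degree budget available to a neighbour of the $6$-vertex $v$ in an ($A_8$)-configuration; the contradiction shows the supposition fails, giving $3\in C(u_1)$ or the required $(3,i)_{(u_1,u_i)}$-path.

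The main obstacle, as throughout these recolouring lemmas, is the acyclicity bookkeeping for the simultaneous move $uu_1\to j,\ uu_3\to 1$: one must verify that vacating colour $1$ onto $uu_3$ creates no bichromatic cycle beyond the $(1,2)$- and $(1,4)$-types, and in particular must dispose of the $(1,j)$-alternating path from $u_1$ (exiting via colour $1$, since $j\notin C(u_1)$) to $u_3$ (exiting via colour $j$, since $1\notin C(u_3)$), which is not ruled out by acyclicity of $c$ alone and has to be traced through the triangle edges $vu_3,vu_4$ and the fact that $v$ misses colour $4$. I expect the $j=3$ case to be the most delicate point, because there the conclusion is not obtained by finishing directly but by feeding the swap into Corollary~\ref{auv2d(u)5}(2.1) and contradicting the ($A_8$) degree bounds; the subtlety is that one must confirm the swap yields a colouring of the exact shape demanded by that corollary (handling, in particular, the potential $(1,3)_{(u_1,u_3)}$-cycle that the swap could introduce) rather than merely some acyclic colouring.
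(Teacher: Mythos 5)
Your treatment of $j\in\{5,6,7\}$ is exactly the paper's: $(uu_1,uu_3,uv)\overset{\divideontimes}\to(5,1,T_3)$ for $j=5$, and $(uu_1,uu_3)\to(j,1)$ reducing to the already-settled subcase (2.4.1.) for $j\in\{6,7\}$, and your acyclicity bookkeeping for those moves is sound. The $j=3$ case also begins as the paper does, by feeding the swap $(uu_1,uu_3)\to(3,1)$ into Corollary~\ref{auv2d(u)5}(2.1); the problem is your final step. The conclusion $T_{uv}\subseteq B_1\cap B_3$ with $d(v_1),d(v_3)\ge t_{uv}+1=\Delta-1$ is \emph{not} incompatible with the ($A_8$) degree bounds: $v_3$ need not be a neighbour of $u$ at all, so ($A_8$) constrains nothing about $d(v_3)$, and $d(u_4)=d(v_1)\ge\Delta-1$ --- indeed even $d(u_3),d(u_4)\ge\Delta$, which follows from $T_{uv}\subseteq C(u_3)\cap C(u_4)$ together with the two fixed colours each of $u_3,u_4$ already carries --- is perfectly consistent with ($A_{8.1}$) when $\Delta=6$ and with ($A_{8.4}$) when $\Delta=7$. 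No contradiction comes out of degree counting alone.

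The contradiction you actually need is with $T_3\ne\emptyset$ and $T_4\ne\emptyset$, which are established at the very opening of subcase (2.4.2.) by disposing of the degenerate colourings $C(u_3)=\{3,5\}\cup T_{uv}$ and $C(u_4)=\{1,4\}\cup T_{uv}$. Corollary~\ref{auv2d(u)5}(2.1) gives $T_{uv}\subseteq B_1\subseteq C(u_1)\cap C(v_1)=C(u_1)\cap C(u_4)$, i.e.\ $T_4=T_{uv}\setminus C(u_4)=\emptyset$ (and symmetrically $T_{uv}\subseteq B_3\subseteq C(u_3)$ forces $T_3=\emptyset$), contradicting what has just been shown. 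With that substitution your argument closes; as written, the $j=3$ case rests on a false claim. A further minor imprecision: the potential $(1,j)$-bichromatic cycle through the recoloured $uu_1,uu_3$ is killed simply because $u_1$ loses its unique colour-$1$ edge once $uu_1$ is recoloured, not by tracing anything through $vu_3,vu_4$.
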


Assume that $1\not\in C(u_3)$, $H$ contains no $(1, 2)_{(u_2, u_3)}$-path,
and $H$ contains no $(2, j)_{(u_2, u_3)}$-path for some $j\in T_3$.
One sees from \ref{131Cu33Cu1} that for each $i\in \{3, 5, 6, 7\}\setminus C(u_1)$,
$H$ contains a $(2, i)_{(u_1, u_2)}$-path.
If $3\not\in C(u_1)\cup C(u_4)$,
then $(u u_4, u u_3)\to (3, j)$ reduces the proof to (2.2.2.).
If $5\not\in C(u_1)\cup C(u_4)$,
then $(u u_4, u u_3)\to (5, j)$ reduces the proof to (2.2.1.).
If there exists a $i\in [6, 7]\setminus (C(u_1)\cup C(u_3)\cup C(u_4))$,
then $(u u_4, u u_3)\to (i, j)$ reduces the proof to (2.2.2.).
Hence, we proceed with the following proposition:
\begin{proposition}
\label{131Cu33567}
If $1\not\in C(u_3)$, $H$ contains no $(1, 2)_{(u_2, u_3)}$-path,
and $H$ contains no $(2, j)_{(u_2, u_3)}$-path for some $j\in T_3$,
then $3, 5\in C(u_1)\cup C(u_4)$ and $6, 7\in C(u_1)\cup C(u_4)\cup C(u_3)$.
\end{proposition}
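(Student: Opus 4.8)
The plan is to prove the contrapositive: for any failure of the conclusion I exhibit a recolouring that either completes an acyclic edge $(\Delta+5)$-colouring of $G$ or reduces to one of the already-settled cases (2.2.1) and (2.2.2), so that we are done; hence the stated conclusion must hold. Concretely, suppose the conclusion fails, so there is a \emph{bad} colour $c^\ast$ with either $c^\ast\in\{3,5\}$ and $c^\ast\notin C(u_1)\cup C(u_4)$, or $c^\ast\in\{6,7\}$ and $c^\ast\notin C(u_1)\cup C(u_3)\cup C(u_4)$. By the hypothesis of the proposition, fix a colour $j_0\in T_3$ for which $H$ has no $(2,j_0)_{(u_2,u_3)}$-path. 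For each bad colour I perform the simultaneous recolouring $(uu_4,uu_3)\to(c^\ast,j_0)$ and argue it is a valid acyclic edge $(\Delta+5)$-colouring of $H$ whose structure matches, after interchanging the labels of $u_3$ and $u_4$, case (2.2.1) when $c^\ast=5$ and case (2.2.2) when $c^\ast\in\{3,6,7\}$.

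First I would record a sharpening of Proposition~\ref{131Cu33Cu1}: for every $i\in\{3,5,6,7\}\setminus C(u_1)$, the alternating path it supplies can be taken to be a $(2,i)_{(u_1,u_2)}$-path rather than a $(i,4)_{(u_1,u_4)}$-path. This relies on $i\notin C(u_1)$ to force the relevant path to leave $u_1$ along a non-$i$ edge, combined with Lemma~\ref{lemma06}; it is exactly the ingredient needed below to control the $(c^\ast,2)$ interaction once $uu_4$ is recoloured. With $j_0$ and this refinement in hand, the reduction becomes a matter of checking that $(uu_4,uu_3)\to(c^\ast,j_0)$ creates no new bichromatic cycle, which I split according to which recoloured edges a putative cycle uses.

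The bulk of the work, and the main obstacle, is this acyclicity verification. A cycle using only $uu_3$ (now $j_0$) would be a $(1,j_0)$- or a $(2,j_0)$-cycle; the first is impossible since $1\notin C(u_3)$, the second since no $(2,j_0)_{(u_2,u_3)}$-path exists by the choice of $j_0$. A cycle using only $uu_4$ (now $c^\ast$) would be a $(1,c^\ast)$- or a $(2,c^\ast)$-cycle; the first is blocked by $c^\ast\notin C(u_1)$, since then no alternating path can leave $u_1$ except along $uu_1$, and the second by the refined $(2,c^\ast)_{(u_1,u_2)}$-path together with the maximality in Lemma~\ref{lemma06}, which forbids a competing $(2,c^\ast)_{(u_2,u_4)}$-path avoiding $u$. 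The delicate case is a cycle using \emph{both} recoloured edges, i.e.\ a $(c^\ast,j_0)_{(u_3,u_4)}$-path closing through $u$; here $j_0\in C(u_4)$ because $j_0\in T_3\subseteq C(u_4)$, so the extension must be blocked at $u_3$. For $c^\ast=3$ this is automatic, as $3\notin C(u_3)$ after recolouring; for $c^\ast=5$ it is automatic because the only $5$-edge at $u_3$ is $vu_3$ and $v$ carries no further $5$- or $j_0$-edge, so the alternating path cannot continue past $v$; and for $c^\ast\in\{6,7\}$ this is precisely why the badness condition additionally requires $c^\ast\notin C(u_3)$.

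Finally I would confirm the reduction target. After $(uu_4,uu_3)\to(c^\ast,j_0)$ we have $C(u)=\{1,2,j_0,c^\ast\}$, and since $c^\ast\in C(v)$ the new common set is $A_{uv}=\{1,c^\ast\}$; the vertex $u_4$ now carries both incident edges $uu_4,vu_4$ in $A_{uv}$, while $u_3$ carries $c(vu_3)=5$ and $c(uu_3)=j_0\in U_{uv}$. Interchanging $u_3$ and $u_4$, this is exactly case (2.2.1) when $c^\ast=5$ (then $5\in A_{uv}$ sits on the $v$-side of the ``both-in-$A_{uv}$'' vertex) and case (2.2.2) when $c^\ast\in\{3,6,7\}$ (then $c(vu_3)=5\in V_{uv}$). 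I expect the both-edge cycle bookkeeping and the clean derivation of the $(2,i)_{(u_1,u_2)}$-path refinement to be the only genuinely delicate points; the remaining steps are routine membership checks and path chasing.
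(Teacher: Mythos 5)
Your proposal follows exactly the paper's own argument: it first upgrades Proposition~\ref{131Cu33Cu1} to a $(2,i)_{(u_1,u_2)}$-path for each $i\in\{3,5,6,7\}\setminus C(u_1)$, then for any bad colour $c^\ast$ performs the recolouring $(uu_4,uu_3)\to(c^\ast,j)$ and reduces to case (2.2.1) when $c^\ast=5$ and to case (2.2.2) when $c^\ast\in\{3,6,7\}$, which is precisely the paper's reduction. The additional acyclicity bookkeeping you supply is detail the paper leaves implicit, not a different route.
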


Assume that $H$ contains a $(\rho, T_3)_{(u_3, v_\rho)}$-path for some $\rho\in [6, 7]\cap C(u_3)$.
If $\rho\not\in S_u\setminus C(u_3)$,
one sees from \ref{31rho8234} that $H$ contains a $(3, \rho)_{(u_1, u_3)}$-path:
if $4\not\in B_3$,
then $(u u_4, u v)\overset{\divideontimes}\to (\rho, 4)$;
if $2\not\in B_1\cup B_3$,
then $(u u_2, u v)\overset{\divideontimes}\to (\rho, 2)$;
if $H$ contains no $(2, j)_{(u_2, u_3)}$-path for some $j\in T_3$,
then $(u u_4, u u_3)\to (\rho, j)$ reduces the proof to (2.2.2.).
If $1\not\in C(u_3)$, $H$ contains no $(1, 2)_{(u_2, u_3)}$-path,
$H$ contains no $(2, j)_{(u_2, u_3)}$-path for some $j\in T_3$,
and $\rho\not\in C(u_1)\cup C(u_4)$,
one sees from \ref{131Cu33Cu1} that $H$ contains a $(2, \rho)_{(u_1, u_2)}$-path,
then $(u u_4, u u_3)\to (\rho, j)$ reduces the proof to (2.2.2.).
Hence, we proceed with the following proposition:
\begin{proposition}
\label{31rho84B32T3}
Assume that $H$ contains a $(\rho, T_3)_{(u_3, v_\rho)}$-path, $\rho\in [6, 7]\cap C(u_3)$.
\begin{itemize}
\parskip=0pt
\item[{\rm (1)}]
	if $\rho\not\in S_u\setminus C(u_3)$,
    then $4\in B_3$, $2\in B_1\cup B_3$, and $H$ contains a $(2, T_3)_{(u_2, u_3)}$-path, $2, 4\in C(u_3)$, $T_3\subseteq C(u_2)$.
\item[{\rm (2)}]
	if $1\not\in C(u_3)$, $H$ contains no $(1, 2)_{(u_2, u_3)}$-path, and $H$ contains no $(2, j)_{(u_2, u_3)}$-path for some $j\in T_3$, then $\rho\in C(u_1)\cup C(u_4)$.
\end{itemize}
\end{proposition}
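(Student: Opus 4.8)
The plan is to establish both parts by the same \emph{forcing-by-elimination-of-winning-moves} scheme that drives all of Subsection 3.2.2: under the standing assumption that neither an acyclic edge $(\Delta+5)$-coloring of $G$ (obtained via $\overset{\divideontimes}\to$) nor a coloring of $H$ with smaller $|C(u)\cap C(v)|$ is available, and that no reduction to an already-settled case such as (2.2.1) or (2.2.2) succeeds, I read off the negations of the hypotheses enabling each such move; those negations are exactly the claimed structural constraints. Throughout I keep \ref{prop3001} and Lemma~\ref{lemma06} at hand, since they are what certify that the recolorings are acyclic.

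For part (1), I would first translate the hypothesis $\rho\notin S_u\setminus C(u_3)$ into $\rho\notin C(u_1)\cup C(u_2)\cup C(u_4)$ (recalling $\rho\in\{6,7\}$ and $c(uu_i)=i$), and then apply \ref{31rho8234} to the given $(\rho,T_3)_{(u_3,v_\rho)}$-path. Since $\rho\notin C(u_1)$, the alternative in \ref{31rho8234} must hold, and because $\rho$ is absent from $C(u_2)$ and $C(u_4)$ the only admissible endpoint is $u_3$, yielding a $(3,\rho)_{(u_1,u_3)}$-path. With this path in hand I dispatch the three escape routes in turn: if $4\notin B_3$ then $(uu_4,uv)\overset{\divideontimes}\to(\rho,4)$ finishes $G$; if $2\notin B_1\cup B_3$ then $(uu_2,uv)\overset{\divideontimes}\to(\rho,2)$ finishes $G$; and if for some $j\in T_3$ there is no $(2,j)_{(u_2,u_3)}$-path, then the swap $(uu_4,uu_3)\to(\rho,j)$ lands in case (2.2.2). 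As all three are unavailable, I obtain $4\in B_3$, $2\in B_1\cup B_3$, and a $(2,j)_{(u_2,u_3)}$-path for every $j\in T_3$; the last fact is precisely the $(2,T_3)_{(u_2,u_3)}$-path, and since each $j\in T_3$ satisfies $j\notin C(u_3)$ it forces $2\in C(u_3)$ and $T_3\subseteq C(u_2)$, while $4\in B_3\subseteq C(u_3)\cap C(v_3)$ gives $4\in C(u_3)$.

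For part (2), I would argue by contradiction: assume the three hypotheses ($1\notin C(u_3)$, no $(1,2)_{(u_2,u_3)}$-path, and some $j\in T_3$ with no $(2,j)_{(u_2,u_3)}$-path) together with $\rho\notin C(u_1)\cup C(u_4)$. The first two hypotheses are exactly the premise of \ref{131Cu33Cu1}, which applied to $\rho\in\{6,7\}$ gives $\rho\in C(u_1)$ or a $(\rho,i)_{(u_1,u_i)}$-path with $i\in\{2,4\}$; since $\rho\notin C(u_1)\cup C(u_4)$ this must be the $(2,\rho)_{(u_1,u_2)}$-path. But then $(uu_4,uu_3)\to(\rho,j)$ is again a legitimate recoloring reducing to case (2.2.2), contradicting the standing assumption. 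Hence $\rho\in C(u_1)\cup C(u_4)$.

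The main obstacle is not the bookkeeping of palette memberships but justifying that each asserted recoloring is admissible, i.e.\ introduces no bichromatic cycle: one must verify via \ref{prop3001} and especially Lemma~\ref{lemma06} that the pre-existing alternating paths (the $(\rho,T_3)$-path at $u_3$, the derived $(3,\rho)_{(u_1,u_3)}$-path, and the $(2,\rho)_{(u_1,u_2)}$-path) confine any potential $(\rho,\cdot)$- or $(2,\cdot)$-bichromatic cycle away from the recolored edges, and that the swaps $(uu_4,uu_3)\to(\rho,j)$ genuinely produce a coloring matching the configuration of case (2.2.2) after a color relabeling. Checking that these alternating paths cannot be re-routed through $u$ after the swap, so that Lemma~\ref{lemma06} still applies, is the delicate point; it is nonetheless routine given the absence of $\rho$ from $C(u_1)\cup C(u_2)\cup C(u_4)$ in part (1) and from $C(u_1)\cup C(u_4)$ in part (2), which pins down the endpoints of every relevant path.
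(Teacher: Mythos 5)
Your proof is correct and follows essentially the same route as the paper: part (1) extracts the $(3,\rho)_{(u_1,u_3)}$-path from \ref{31rho8234} using $\rho\notin C(u_1)\cup C(u_2)\cup C(u_4)$ and then eliminates the same three moves ($(uu_4,uv)\overset{\divideontimes}\to(\rho,4)$, $(uu_2,uv)\overset{\divideontimes}\to(\rho,2)$, and $(uu_4,uu_3)\to(\rho,j)$ reducing to (2.2.2.)), while part (2) invokes \ref{131Cu33Cu1} to force the $(2,\rho)_{(u_1,u_2)}$-path and applies the same reduction. No gaps.
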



If $5\not\in C(u_4)$, $H$ contains no $(5, i)_{(u_4, u_i)}$-path for each $i\in [1, 2]$,
and $H$ contains no $(2, j)_{(u_2, u_3)}$-path for some $j\in T_3$,
then $(u u_4, u u_3, u v)\overset{\divideontimes}\to (5, j, T_4)$.
If $5\not\in C(u_2)$, $H$ contains no $(5, i)_{(u_2, u_i)}$-path for each $i\in \{1, 4\}$,
and $H$ contains no $(4, j)_{(u_4, u_3)}$-path for some $j\in T_3$,
then $(u u_2, u u_3)\to (5, j)$ reduces the proof to (2.3).
One sees from \ref{315234} that if $5\not\in S_u\setminus C(u_3)$, then $H$ contains a $(3, 5)_{(u_1, u_3)}$-path.
In the other case, we proceed with the following proposition:
\begin{proposition}
\label{135Cu45Cu2}
\begin{itemize}
\parskip=0pt
\item[{\rm (1)}]
	If $5\not\in C(u_4)$, and $H$ contains no $(5, i)_{(u_4, u_i)}$-path for each $i\in [1, 2]$,
    then $H$ contains a $(2, T_3)_{(u_2, u_3)}$-path and $T_3\subseteq C(u_2)$.
\item[{\rm (2)}]
	If $5\not\in C(u_2)$, and $H$ contains no $(5, i)_{(u_2, u_i)}$-path for each $i\in \{1, 4\}$,
    then $H$ contains a $(4, T_3)_{(u_4, u_3)}$-path and $T_3\subseteq C(u_4)$.
\item[{\rm (3)}]
	If $5\not\in S_u\setminus C(u_3)$,
    then $2, 4\in C(u_3)\cap C(u_1)$, $H$ contains a $(2, T_3)_{(u_2, u_3)}$-path, a $(4, T_3)_{(u_4, u_3)}$-path,
    and $(4, T_1)_{(u_1, u_4)}$-path, a $(2, T_1)_{(u_2, u_1)}$-path.
\end{itemize}
\end{proposition}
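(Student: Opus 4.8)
The plan is to prove all three parts by the ``recolour or be done'' device that pervades this subsection: for each part I exhibit an explicit recolouring of the edges incident with $u$ (together with a colour for $uv$) which, if it creates no bichromatic cycle, yields an acyclic edge $(\Delta+5)$-colouring of $G$; invoking the standing convention that we are otherwise done, the failure of the recolouring is forced, and Lemma~\ref{lemma06} together with Corollary~\ref{auv2}(1.2) then pins down exactly the bichromatic path that blocks it. Throughout I keep in mind that $d(u)=5$, $A_{uv}=\{1,3\}$, $U_{uv}=\{2,4\}$, $c(vu_3)=5$ (so $5\in C(u_3)$) and $c(vu_4)=1$.

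For part (1), fix $j\in T_3$ and attempt the recolouring $(uu_4,uu_3,uv)\overset{\divideontimes}\to(5,j,\beta)$ with $\beta\in T_4$. The hypotheses $5\notin C(u_4)$ and the absence of a $(5,i)_{(u_4,u_i)}$-path for $i\in\{1,2\}$ make $uu_4\to 5$ introduce no bichromatic cycle; since $j\in T_3\subseteq T_{uv}$ and $\beta\in T_4\subseteq T_{uv}$, the remaining steps are proper at $u_3$, at $u$ and at $v$. Once the other potential bichromatic cycles through the recoloured edge $uu_3$ are excluded by these hypotheses and by the choice $\beta\in T_4$, the sole surviving obstruction is a $(2,j)_{(u_2,u_3)}$-path, which is precisely the blocking chain identified by Corollary~\ref{auv2}(1.2) once the partner colour $4$ has been neutralised by $uu_4\to 5$. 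Hence, if some $j\in T_3$ admitted no such path we could complete an acyclic edge $(\Delta+5)$-colouring of $G$ and be done; as we are not, every $j\in T_3$ carries a $(2,j)_{(u_2,u_3)}$-path, and each such path forces $j\in C(u_2)$, so $T_3\subseteq C(u_2)$. Part (2) is the mirror image: the attempted recolouring $(uu_2,uu_3)\to(5,j)$ reduces the instance to case (2.3), so under its hypotheses its obstruction is a $(4,j)_{(u_4,u_3)}$-path for each $j\in T_3$, giving $4\in C(u_3)$ and $T_3\subseteq C(u_4)$.

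For part (3) I would first record that $5\notin S_u\setminus C(u_3)$ means the colour $5$ occurs in $S_u$ only through $u_3$, whence $5\notin C(u_1)\cup C(u_2)\cup C(u_4)$, and then by \ref{315234} (ruling out the $i=2$ alternative, which would place $5$ in $C(u_2)$ and contradict the previous line) that $H$ contains a $(3,5)_{(u_1,u_3)}$-path. Using $5\notin C(u_2)\cup C(u_4)$ together with the uniqueness of this maximal $(3,5)$-path from $u_1$ (Lemma~\ref{lemma06}), I would verify that the path hypotheses of both (1) and (2) are met, so that $2,4\in C(u_3)$ and $T_3\subseteq C(u_2)\cap C(u_4)$ along with the two $T_3$-path families. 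The conclusions $2,4\in C(u_1)$ and the $(4,T_1)_{(u_1,u_4)}$- and $(2,T_1)_{(u_2,u_1)}$-paths I would obtain by the symmetric argument with $u_1$ (also coloured by a member of $A_{uv}$) in the role of $u_3$: for each $j'\in T_1$ the recolouring $uu_1\to j'$ is blocked by the non-reduction convention and Corollary~\ref{auv2}(1.2), and the same exclusion of spurious $(5,\cdot)$-paths at $u_2$ and $u_4$ forces the blocking partner to reach both $u_2$ and $u_4$.

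I expect the genuine difficulty to lie in part (3): verifying that, under $5\notin C(u_1)\cup C(u_2)\cup C(u_4)$, none of the spurious bichromatic paths $(5,i)_{(u_4,u_i)}$ or $(5,i)_{(u_2,u_i)}$ can occur, which is where the $(3,5)_{(u_1,u_3)}$-path and the Kempe-chain uniqueness of Lemma~\ref{lemma06} must be combined carefully (noting that $1\in C(u_4)$ and $2\in C(u_2)$ leave a priori room for such chains). Once those exclusions are in place, parts (1) and (2) and their $u_1$-analogues apply essentially verbatim, and the stated path families and containments follow directly.
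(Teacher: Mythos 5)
Your treatment of parts (1) and (2) is exactly the paper's: the compound recolourings $(uu_4,uu_3,uv)\overset{\divideontimes}\to(5,j,T_4)$ and $(uu_2,uu_3)\to(5,j)$ (the latter reducing to case (2.3)) are precisely what the paper uses, and the identification of the unique surviving obstruction as a $(2,j)_{(u_2,u_3)}$-path, resp.\ a $(4,j)_{(u_4,u_3)}$-path, is correct. The first half of part (3) is also fine: $5\notin S_u\setminus C(u_3)$ gives $5\notin C(u_1)\cup C(u_2)\cup C(u_4)$, which both supplies the $(3,5)_{(u_1,u_3)}$-path via \ref{315234} (you should rule out $i=4$ as well as $i=2$, and the case $5\in C(u_1)$, but that is the same one-line observation) and verifies the hypotheses of (1) and (2), yielding $2,4\in C(u_3)$ and the two $T_3$-path families.

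The gap is in the second half of part (3), namely $2,4\in C(u_1)$ together with \emph{both} the $(4,T_1)_{(u_1,u_4)}$- and $(2,T_1)_{(u_1,u_2)}$-path families. Corollary~\ref{auv2}(1.2) (equivalently Lemma~\ref{auvge2}(1.1)) applied to $j\in T_1$ only produces \emph{one} $a_1\in\{2,4\}\cap C(u_1)$ with an $(a_1,j)_{(u_1,u_{a_1})}$-path, and this $a_1$ may a priori depend on $j$; your claim that ``the exclusion of spurious $(5,\cdot)$-paths forces the blocking partner to reach both $u_2$ and $u_4$'' does not follow, since that exclusion concerns the colour $5$ and says nothing about which of the colours $2,4$ realises the blocking chain. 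The situation is also not literally symmetric to that of $u_3$: here $T_1\subseteq B_3$ while $T_3\subseteq B_1$, and $c(vu_4)=1\in C(u)$ whereas $c(vu_3)=5\notin C(u)$, so the obstruction analysis must be redone. What actually closes the gap is a further pair of compound recolourings of the same flavour as in (1) and (2): since $5\notin C(u_1)\cup C(u_2)\cup C(u_4)$, the move $uu_2\to 5$ (resp.\ $uu_4\to 5$) creates no bichromatic cycle, and after it the attempted $uu_1\to j$ for $j\in T_1$ can only be blocked by a $(4,j)_{(u_1,u_4)}$-path (resp.\ a $(2,j)_{(u_1,u_2)}$-path): the $(j,5)$-obstruction would need $5\in C(u_1)$, and the $(j,3)$-obstruction is excluded because $j\in B_3$ and Lemma~\ref{lemma06} confine the $(3,j)$-component of $u$ to the path through $u_3,v_3,v$, which cannot contain $u_1$ as $j\notin C(u_1)$. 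Since an unblocked $(uu_2,uu_1)\to(5,j)$ or $(uu_4,uu_1)\to(5,j)$ reduces the configuration to an earlier, already-settled case, both path families (and hence $2,4\in C(u_1)$ and $T_1\subseteq C(u_2)\cap C(u_4)$) are forced. You need to supply these two recolourings explicitly; without them the stated conclusion of part (3) is not justified.
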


One sees from Corollary~\ref{auv2} (1.2) that if $H$ contains a $(2, T_4)_{(u_2, u_4)}$-path,
then $T_4\subseteq C(u_1)$.

\begin{lemma}
\label{134inCu3}
$4\in C(u_3)$.
\end{lemma}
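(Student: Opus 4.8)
The plan is to argue by contradiction: suppose $4 \notin C(u_3)$. The setup preceding the lemma already provides a color $a_3 \in \{2,4\} \cap C(u_3)$, so this assumption forces $a_3 = 2$, giving $\{2,3,5\} \subseteq C(u_3)$ while $4 \notin C(u_3)$. I would then isolate the decisive observation. Because $c(vu_4)=1$, the $v$-neighbor realizing color $1 \in A_{uv}$ is $u_4$ itself, so $B_1 \subseteq C(u_1)\cap C(u_4)$, and every $j \in T_3$ lies in $B_1$ (since $B_3 \subseteq C(u_3)$ and $j \notin C(u_3)$). Recoloring the edge $uu_3$ from $3$ to such a $j$ deletes $3$ from $C(u)$ and inserts $j \notin C(v)$, hence reduces $C(u)\cap C(v)$ to $\{1\}$; the only bichromatic cycles it can create are $(j,c)$-cycles with $c \in C(u_3)\cap\{1,2,4\}$, each realized by a $(j,c)_{(u_3,u_c)}$-path. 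Since now $4 \notin C(u_3)$, the value $c=4$ is impossible, so this recoloring is valid — whence an acyclic edge $(\Delta+5)$-coloring for $G$ follows by Corollary~\ref{auv2} and Lemma~\ref{auv1coloring} — unless, for every $j \in T_3$, it is blocked either by a $(2,j)_{(u_2,u_3)}$-path or (when $1\in C(u_3)$) by a $(j,1)_{(u_3,u_1)}$-path. Thus I may assume all these path obstructions hold, which is exactly the hypothesis shape of the propositions already available.

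Next I would split on whether $1 \in C(u_3)$. If $1 \notin C(u_3)$, then whenever the ``no-path'' hypotheses of \ref{131Cu33567} and \ref{31rho84B32T3}(2) fail we finish immediately via the indicated swap at $u$ followed by $uv \to T_3$; otherwise those propositions force $3,5 \in C(u_1)\cup C(u_4)$ and $6,7\in C(u_1)\cup C(u_4)\cup C(u_3)$, while \ref{131Cu33Cu1} places each of $3,5,6,7$ in $C(u_1)$ or on a $(\cdot,i)_{(u_1,u_i)}$-path with $i\in\{2,4\}$. If instead $1 \in C(u_3)$, the surviving obstructions are the $(j,1)_{(u_3,u_1)}$- and $(2,j)_{(u_2,u_3)}$-paths, and I would use \ref{315234} together with the $(\rho,8)_{(u_3,v_\rho)}$-paths of \ref{31rho8234} ($\rho\in\{6,7\}$) to push $5,6,7$ into $C(u_1)$ or onto $(\cdot,i)_{(u_1,u_i)}$-paths with $i\in\{2,4\}$. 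In parallel I would exploit the $T_4$ side: since $4 \notin C(u_3)$ gives $T_4 \subseteq B_3 \subseteq C(v_3)$, Lemma~\ref{auvge2}(1.1) supplies $(3,l)_{(u_3,v_3)}$-paths and Corollary~\ref{auv2d(u)5}(6) forces $R_3 \neq \emptyset$, while the remark ``if $H$ contains a $(2,T_4)_{(u_2,u_4)}$-path then $T_4\subseteq C(u_1)$'' (Corollary~\ref{auv2}(1.2)) pins down where $T_4$ must sit.

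Finally, having driven colors $\{2,3,4,5,6,7\}$ and the colors of $T_3$ and $T_4$ into $C(u_1)$, $C(u_2)$, $C(u_4)$ with controlled multiplicities, I would run the standard token count $\sum_{x\in N(u)\setminus\{v\}}d(x)=\sum_{i\in[1,4]}w_i+\sum_{j\in T_{uv}}\mathrm{mult}_{S_u}(j)$ and show it reaches $2\Delta+14$, contradicting the $(A_7)/(A_8)$ degree bound $\sum_{x\in N(u)\setminus\{v\}}d(x)\le 2\Delta+13$ coming from Eq.~(\ref{eq16}) and its analogue for $(A_8)$; in the residual configurations where the count stalls at $2\Delta+13$ I would instead display an explicit reduction of $a_{uv}$ (a swap at $u$ plus $uv \to T_3$) that completes the coloring. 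The hard part will be the bookkeeping of the two interacting roles of $u_4$ — as the $v$-neighbor colored $1$ and the $u$-neighbor colored $4$ — because $T_3 \subseteq C(u_4)$ and $T_4 \subseteq B_3 \subseteq C(v_3)$ interlock the memberships of the various $C(u_i)$; keeping every path-existence split mutually consistent while the degree sum stays tight enough to close is where the argument is most delicate.
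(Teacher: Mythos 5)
Your opening is sound and matches the paper's strategy: assume $4\notin C(u_3)$, observe that every $j\in T_3$ lies in $B_1\subseteq C(u_1)\cap C(u_4)$ (because $B_3\subseteq C(u_3)$), and note that recoloring $uu_3\to j$ would drop $|C(u)\cap C(v)|$ to $1$ unless blocked by a bichromatic $(j,c)$-cycle with $c\in C(u_3)\cap\{1,2,4\}$. But there is an internal inconsistency here: you correctly record at the outset that the blocker is some $a_3\in\{2,4\}\cap C(u_3)$ (Lemma~\ref{auvge2}(1.1) already excludes $c=i_j=1$ via Lemma~\ref{lemma06}), yet you then reintroduce ``a $(j,1)_{(u_3,u_1)}$-path'' as a live obstruction and build a case split around $1\in C(u_3)$. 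That case is vacuous, and entertaining it obscures the one fact the hypothesis $4\notin C(u_3)$ actually buys you: the blocker is $2$ for \emph{every} $j\in T_3$, so $2\in C(u_3)$ and $T_3\subseteq C(u_2)$ unconditionally.

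The more serious problem is that everything after this observation is deferred, and the deferred part is where the lemma is actually proved. Concretely, you never establish (i) the dual statement at $u_4$ — that $2\in C(u_4)$ with $(2,T_4)_{(u_2,u_4)}$-paths, hence $T_4\subseteq C(u_2)$ and, by Corollary~\ref{auv2}(1.2), $T_4\subseteq C(u_1)$ — which is what yields $T_3\cup T_4\subseteq C(u_1)\cap C(u_2)$ and therefore $t_0\ge t_3+t_4$, the term without which the degree count never reaches $2\Delta+13$; you cite the relevant remark but never verify its hypothesis. (ii) The governing case split is not ``$1\in C(u_3)$ or not'' but whether some $j\in T_3$ admits no $(6/7,j)_{(u_3,v)}$-path: that is what triggers Corollary~\ref{auv2}(4) and forces $5\in B_1$ with $5\in C(u_1)\cap C(u_2)\cap C(u_4)$ in one branch, versus $7\in C(u_3)$ (hence $w_3\ge 5$, $t_3\ge 3$) in the other, and these are the extra weights that close the count against Eq.~(\ref{eq16}). (iii) In the branch where the sum stalls at exactly $2\Delta+13$, the paper does not exhibit a further recoloring; tightness forces $2,4\notin C(u_1)$ while $C_{34}\ne\emptyset$ and $T_1\ne\emptyset$, contradicting Corollary~\ref{auv2}(1.2). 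As written, your text is a plan that names the right tools but carries out none of the load-bearing deductions, so it does not yet constitute a proof of the lemma.
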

\begin{proof}
Assume that $4\not\in C(u_3)$.
It follows that $2\in C(u_3)$, $H$ contains a $(2, T_3)_{(u_2, u_3)}$-path,
$2\in C(u_4)$, $H$ contains a $(2, T_4)_{(u_2, u_4)}$-path,
and then $T_4\subseteq C(u_1)$,
and it follows from Corollary~\ref{auv2}(5.1) that $3/1/4\in C(u_2)$, $6, 7\in S_u$.
One sees from $T_3\cup T_4\subseteq C(u_1)\cap C(u_2)$ that $t_0\ge t_3 + t_4$.

When for some $j\in T_3$, $H$ contains no $(j, i)_{(u_3, v_i)}$-path for each $i\in [6, 7]$,
it follows from Corollary~\ref{auv2}(4) that $5\in B_1$ and $H$ contains a $(2, 5)_{(u_2, u_3)}$-path,
i.e., $5\in C(u_i)$, $i\in \{1, 2, 4\}$.
Since $\sum_{x\in N(u)\setminus \{v\}}d(x)\ge\sum_{i\in [1, 4]}w_i + 2t_{uv} + t_0\ge |\{1, 5\}| + |\{2, 5, 3/1/4\}| + |\{3, 5, 2\}| + |\{4, 1, 2, 5\}| + |\{6, 7\}| + 2t_{u v} + t_3 + t_4 = 2\Delta + 10 + 1 + 2 = 2\Delta + 13$,
we have $\Delta = 7$, $t_0 = t_3 + t_4$, where $t_3 = 1$, $t_4 = 2$, and $2, 4\not\in C(u_1)$.
Thus, $C_{34}\ne \emptyset$ and $T_1\ne \emptyset$, a contradiction to Corollary~\ref{auv2}(1.2).

In the other case, for each $j\in T_3$, $H$ contains a $(j, 6/7)_{(u_3, v)}$-path,
assume w.l.o.g. $H$ contains a $(7, 8)_{(u_4, v_7)}$-path.
We distinguish whether or not $H$ contains a $(7, T_3)_{(u_3, v_7)}$-path.
\begin{itemize}
\parskip=0pt
\item
     $H$ contains a $(7, T_3)_{(u_3, v_7)}$-path.
     It follows from \ref{135Cu45Cu2}(3) that $5\in S_u\setminus C(u_3)$,
     and from \ref{31rho84B32T3}(1) that $7\in S_u\setminus C(u_3)$.
     Since $\sum_{x\in N(u)\setminus \{v\}}d(x)\ge\sum_{i\in [1, 4]}w_i + 2t_{u v} + t_0\ge |\{1\}| + |\{2, 3/1/4\}| + |\{3, 5, 2, 7\}| + |\{4, 1, 2\}| + |\{5, 6, 7\}| + 2t_{u v} + t_3 + t_4 = 2\Delta + 9 + 2 + 2 = 2\Delta + 13$,
we have $\Delta = 7$, $t_0 = t_3 + t_4$, where $t_3 = t_4 = 2$, and $2, 4\not\in C(u_1)$.
     Thus, $C_{34}\ne \emptyset$ and $T_1\ne \emptyset$, a contradiction to Corollary~\ref{auv2}(1.2).
\item
     $6\in C(u_3)$. Then $w_3\ge 5$, $w_4\ge 4$ and $t_3\ge 3$, $t_4\ge 2$.
     Hence, $\sum_{x\in N(u)\setminus \{v\}}d(x)\ge\sum_{i\in [1, 4]}w_i + 2t_{u v} + t_0\ge |\{1, 5/2/3/4\}| + |\{2, 3/1/4\}| + w_3 + w_4 + 2t_{u v} + t_3 + t_4 = 2\Delta + 9 + 3 + 2 = 2\Delta + 14$.
\end{itemize}
\end{proof}

Assume $5\not\in C(u_2)$ and $H$ contains no $(5, i)_{(u_2, u_i)}$-path for each $i\in \{1, 3, 4\}$.
If $2\not\in B_1\cup B_3$, then $(u u_2, u v)\overset{\divideontimes}\to (5, 2)$.
If $2\not\in C(u_1)$ and $H$ contains no $(2, 4)_{(u_1, u_4)}$-path,
then $(u u_2, u u_1, u v)\overset{\divideontimes}\to (5, 2, T_3)$.
If $2\not\in C(u_3)$ and $H$ contains no $(2, 4)_{(u_3, u_4)}$-path,
then $(u u_2, u u_3)\to (5, 2)$ reduces the proof to (2.3.).
Hence, we proceed with the following proposition, or otherwise we are done:
\begin{proposition}
\label{135Cu22424}
If $5\not\in C(u_2)$ and $H$ contains no $(5, i)_{(u_2, u_i)}$-path for each $i\in \{1, 3, 4\}$,
then $2\in B_1\cup B_3$, for each $i\in \{1, 3\}$, $2/4\in C(u_i)$,
and if $2\not\in C(u_i)$, then $H$ contains a $(2, 4)_{(u_i, u_4)}$-path and $4\in C(u_i)$.
\end{proposition}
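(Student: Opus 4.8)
The plan is to read the hypothesis as the assertion that the single recolouring $uu_2\to 5$ is legitimate. Indeed $5\notin C(u_2)$ removes the adjacency conflict, and the absence of a $(5,i)_{(u_2,u_i)}$-path for every $i\in\{1,3,4\}=C(u)\setminus\{2\}$ rules out all bichromatic cycles through the recoloured edge; hence $uu_2\to 5$ yields a valid acyclic edge $(\Delta+5)$-colouring of $H$ in which colour $2$ is now free at $u$. Recall that here $A_{uv}=\{1,3\}$, $C(v)=\{1,3\}\cup[5,7]$ with $v_1=u_4$ and $v_5=u_3$, and $T_{uv}=[8,\Delta+5]$, while $T_3\neq\emptyset$ and $4\in C(u_3)$ by Lemma~\ref{134inCu3}. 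With the move $uu_2\to 5$ in hand I would extend the colouring by at most one further edge and show that every way in which the stated conclusion could fail lets us either finish $G$ (an $\overset{\divideontimes}\to$ step) or drop to an earlier case; the surviving possibilities are exactly the claimed structure.

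First I would establish $2\in B_1\cup B_3$. Suppose not. After $uu_2\to 5$ the common palette of $u$ and $v$ is contained in $\{1,3,5\}$, so colouring $uv\to 2$ can only close a $(2,c)$-cycle for $c\in\{1,3,5\}$; the cases $c=1$ and $c=3$ demand a $(1,2)_{(u_1,v_1)}$- or a $(3,2)_{(u_3,v_3)}$-path, i.e. $2\in B_1$ or $2\in B_3$, contrary to assumption, while the case $c=5$ is excluded by Lemma~\ref{lemma06} applied to the maximal $(5,\cdot)$-path produced by the recolouring. Hence $(uu_2,uv)\overset{\divideontimes}\to(5,2)$ colours $G$, so in the surviving case $2\in B_1\cup B_3$.

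Next I would treat the two indices $i\in\{1,3\}$ separately. For $i=1$, assume $2\notin C(u_1)$ and $H$ has no $(2,4)_{(u_1,u_4)}$-path; then after $uu_2\to 5$ the move $uu_1\to 2$ is legitimate (the binding obstruction, a $(2,4)_{(u_1,u_4)}$-path, is absent, and the remaining candidate cycles are barred by the structure of case (2.4.2) together with Lemma~\ref{lemma06}), and $uv\to T_3$ is then valid because every $j\in T_3\subseteq[8,\Delta+5]$ lies outside both palettes and, being outside $C(u_3)$, cannot close a $(j,3)$- or $(j,5)$-cycle; thus $(uu_2,uu_1,uv)\overset{\divideontimes}\to(5,2,T_3)$ finishes $G$. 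For $i=3$, assume $2\notin C(u_3)$ and no $(2,4)_{(u_3,u_4)}$-path; then $(uu_2,uu_3)\to(5,2)$ is a legitimate pair of recolourings leaving $C(u)\cap C(v)=\{1,5\}$, which after relabelling reduces the proof to Case (2.3). Consequently, unless we are already done, for each $i\in\{1,3\}$ we must have $2\in C(u_i)$ or a $(2,4)_{(u_i,u_4)}$-path; and when $2\notin C(u_i)$, that $(2,4)$-path meets $u_i$ along a colour-$4$ edge, forcing $4\in C(u_i)$. This yields $2/4\in C(u_i)$ and the final clause at once, completing the proposition.

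I expect the main obstacle to be the careful bookkeeping of the bichromatic-cycle conditions for the composite recolourings, in particular confirming that $uu_1\to 2$ (resp.\ $uu_3\to 2$) creates no cycle beyond the single $(2,4)_{(u_i,u_4)}$-path singled out in the hypothesis, and that the reduction of the $i=3$ branch to Case (2.3) is faithful once colour $5$ has been absorbed into $C(u)$. These verifications lean on facts already accumulated in case (2.4.2) --- notably $T_{uv}\subseteq B_1\cup B_3$ from \ref{prop3001}, $4\in C(u_3)$ from Lemma~\ref{134inCu3}, and $T_3\neq\emptyset$ --- so the plan is to invoke those rather than re-derive the alternating-path structure, keeping the argument to the three recolouring moves above.
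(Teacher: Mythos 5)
Your proposal follows the paper's own argument essentially verbatim: recolour $uu_2\to 5$ (legitimate by the hypothesis), then dispose of each possible failure of the conclusion by exactly the three moves the paper uses, namely $uv\overset{\divideontimes}\to 2$ when $2\not\in B_1\cup B_3$, $(uu_1,uv)\overset{\divideontimes}\to(2,T_3)$ when $2\not\in C(u_1)$ and there is no $(2,4)_{(u_1,u_4)}$-path, and $uu_3\to 2$ reducing to Case (2.3) in the remaining branch. The only cosmetic difference is your appeal to Lemma~\ref{lemma06} to rule out a $(2,5)$-cycle through $uv$, where the cleaner (and implicit) reason is that after $uu_2\to 5$ the vertex $u_2$ carries no edge coloured $2$, so no $(2,5)$-alternating path can close up through $u_2$; the recolouring moves and the case split are otherwise identical to the paper's.
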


By Corollary~\ref{auv2}(5.2) and (3), we proceed with the following proposition:
\begin{proposition}
\label{13T3T12424}
If $(T_3\cup T_1)\setminus C(u_2)\ne \emptyset$,
then $2\in B_1\cup B_3$, for each $i\in \{1, 3\}$, $2/4\in C(u_i)$,
and if $2\not\in C(u_i)$, then $H$ contains a $(2, 4)_{(u_i, u_4)}$-path and $4\in C(u_i)$.
\end{proposition}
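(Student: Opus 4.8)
The plan is to read this proposition off the general propagation facts of Corollary~\ref{auv2}, specialized to the current configuration of Case (2.4.2), where $A_{uv}=\{1,3\}$, $U_{uv}=\{2,4\}$, $c(vu_3)=5$ and $c(vu_4)=1$ (so in particular $u_4=v_1$). The hypothesis $(T_3\cup T_1)\setminus C(u_2)\neq\emptyset$ furnishes a witness color $j\in T_{uv}$ with $j\notin C(u_2)$ and, say, $j\notin C(u_3)$ (the case $j\notin C(u_1)$ being handled symmetrically by interchanging the two $A_{uv}$-colors). In particular $j\in T_2$, so $j$ is a legitimate candidate for recoloring the $U_{uv}$-edge $uu_2$. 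First I would set $\alpha=2$ and examine the recoloring $uu_2\to j$, which is precisely the operation governed by Corollary~\ref{auv2}(3).

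If $uu_2\to j$ produces no bichromatic cycle, then it yields another acyclic edge $(\Delta+5)$-coloring of $H$ with $a_{uv}=2$ in which color $2$ has moved into $T_{uv}$ (since $2\notin C(v)$ and $j\notin C(u)\cup C(v)$, the intersection stays $\{1,3\}$). Applying Proposition~\ref{prop3001} to this new coloring forces $2\in B_1\cup B_3$, which is the first assertion. Running the Lemma~\ref{auvge2}(1.1) analysis of the new coloring then gives, for each $i\in\{1,3\}$ with $2\notin C(u_i)$, a color $a_i\in C(u_i)\cap\{2,4\}$ together with an $(a_i,2)_{(u_{a_i},u_i)}$-path; since $2\notin C(u_i)$ we must have $a_i=4$, whence $4\in C(u_i)$, $u_{a_i}=u_4$, and the path is a $(2,4)_{(u_i,u_4)}$-path. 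Collecting the two possibilities, namely $2\in C(u_i)$ or else $4\in C(u_i)$ with a $(2,4)_{(u_i,u_4)}$-path, yields both $2/4\in C(u_i)$ and the stated path condition. This is exactly the content of Corollary~\ref{auv2}(3) once $\{i_3,i_4\}$ is identified with $\{2,4\}$, and the multiplicity statement $\mathrm{mult}_{S_u}(2)\ge 2$ comes along for free.

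The delicate step, and the reason Corollary~\ref{auv2}(5.2) is invoked, is justifying that the recoloring $uu_2\to j$ may be taken admissibly. Any new bichromatic cycle through the recolored edge must be a $(j,c)$-cycle with $c\in\{1,3,4\}\cap C(u_2)$, realized by a $(c,j)_{(u_2,u_c)}$-path. Because $j\notin C(u_3)$ there is no $j$-edge at $u_3$, so the case $c=3$ is impossible; the case $c=1$ is likewise excluded when the witness is drawn from $T_1$. The remaining obstruction is a $(4,j)$-cycle routed through $u_4$, and in the regime $\mathrm{mult}_{S_u}(j)=2$ this is ruled out by Corollary~\ref{auv2}(5.2), which asserts that $H$ contains no $(j,i)_{(u_4,u_i)}$-path for any $i\in C(u)$, while simultaneously supplying $\mathrm{mult}_{S_u}(4)\ge 2$ and $T_1\subseteq C(u_3)$. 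I expect the main obstacle to be the bookkeeping of combining the two witness sub-cases $j\in T_3$ and $j\in T_1$, correctly translating the endpoints of the alternating paths through the identification $u_4=v_1$, and invoking (5.2) in exactly the multiplicity-two sub-case while a single chained recoloring (in the spirit of the explicit argument for the companion Proposition~\ref{135Cu22424}) disposes of the remaining $u_1$-routed obstruction; once admissibility is secured, the three structural conclusions drop out verbatim from (3) and (5.2).
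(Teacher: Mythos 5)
Your overall route is the paper's: read the statement off Corollary~\ref{auv2}(3) once the recoloring $uu_2\to j$ is known to be admissible, with Corollary~\ref{auv2}(5.2) supplying the admissibility. The problem is that your admissibility step, as written, does not close. First, your exclusion of the $(j,1)$- and $(j,3)$-cycle obstructions is tied to the ``matching'' witness type only: if the witness is $j\in T_3\setminus C(u_2)$, then mult$_{S_u}(j)\ge 2$ together with $j\notin C(u_2)\cup C(u_3)$ forces $j\in C(u_1)\cap C(u_4)$, so a $(1,j)_{(u_2,u_1)}$-path is a live obstruction (whenever $1\in C(u_2)$) that neither ``$j\notin C(u_3)$'' nor ``the witness is drawn from $T_1$'' touches; symmetrically for $j\in T_1\setminus C(u_2)$ and the color $3$. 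Second, your appeal to (5.2) uses the wrong role assignment. In Corollary~\ref{auv2}(5) the index $i_3$ is by definition the element of $U_{uv}=\{2,4\}$ with $j\in C(u_{i_3})$ and $i_4$ the one with $j\notin C(u_{i_4})$; here $j\notin C(u_2)$ and $j\in C(u_4)$ force $i_4=2$ and $i_3=4$. With that identification (5.2) yields ``$H$ contains no $(j,i)_{(u_2,u_i)}$-path for each $i\in C(u)$'' --- which is precisely the admissibility of $uu_2\to j$ and kills all three obstruction colors $1,3,4$ in one stroke --- together with mult$_{S_u}(2)\ge 2$ and $T_{i_1}\subseteq C(u_4)$. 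What you quote (``no $(j,i)_{(u_4,u_i)}$-path'', ``mult$_{S_u}(4)\ge 2$'', ``$T_1\subseteq C(u_3)$'') corresponds to $i_4=4$, i.e.\ to the admissibility of $uu_4\to j$, which is not the recoloring you perform; moreover ``$T_1\subseteq C(u_3)$'' cannot be an instance of ``$T_{i_1}\subseteq C(u_{i_3})$'' at all, since $i_3$ ranges over $U_{uv}=\{2,4\}$.

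Once (5.2) is applied with $i_4=2$, the rest of your argument is fine and coincides with the paper's one-line derivation. Note also that mult$_{S_u}(j)=2$ holds automatically for every admissible witness (it is $\ge 2$ by Lemma~\ref{auvge2}(2) and $\le 2$ because $j$ misses $C(u_2)$ and one of $C(u_1),C(u_3)$), so there is no separate ``multiplicity-two regime'' to isolate and no residual $u_1$-routed obstruction requiring a chained recoloring. Admissibility secured, Corollary~\ref{auv2}(3) with $\alpha=2$ gives $2\in B_1\cup B_3$ and, for each $i\in\{1,3\}$ with $2\notin C(u_i)$, some $a_i\in C(u_i)\cap\{2,4\}$; since $2\notin C(u_i)$ this forces $a_i=4$, hence $4\in C(u_i)$ and the $(2,4)_{(u_i,u_4)}$-path, exactly as stated.
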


\begin{lemma}
\label{13non2jT4}
\begin{itemize}
\parskip=0pt
\item[{\rm (1)}]
	There exists a $j\in T_4$ such that $H$ contains no $(2, j)_{(u_2, u_4)}$-path.
\item[{\rm (2)}]
	$4\in B_3$, $2/4\in C(u_1)$, and if $4\not\in C(u_1)$, then $H$ contains $(2, 4)_{(u_1, u_2)}$-path.
\item[{\rm (3)}]
	There exists a $\gamma_0\in [5, 7]\cap C(u_4)$ such that $H$ contains a $(\gamma_0, 4)_{(u_4, v_{\gamma_0})}$-path.
\end{itemize}
\end{lemma}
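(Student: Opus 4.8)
The plan is to prove the three assertions in order, with (1) carrying the real weight and (2)--(3) following from it through the recoloring corollaries already set up for this configuration. Throughout I work in case (2.4.2.), where $c(vu_3)=5$, $c(vu_4)=1$, $C(v)=\{1,3\}\cup[5,7]$, $T_{uv}\subseteq C(u_3)\cup C(u_4)$, and, by Lemma~\ref{134inCu3}, $4\in C(u_3)$, so that $\{3,5,4\}\subseteq W_3$ and $\{4,1,2,b_1\}\subseteq W_4$ with $b_1\in[5,7]\cap C(u_4)$.

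For (1), I would argue by contradiction: suppose $H$ contains a $(2,j)_{(u_2,u_4)}$-path for every $j\in T_4$, i.e.\ a $(2,T_4)_{(u_2,u_4)}$-path. Then $2\in C(u_4)$ and $T_4\subseteq C(u_2)$, and by the remark preceding the lemma (Corollary~\ref{auv2}(1.2)) also $T_4\subseteq C(u_1)$. Since $T_4\subseteq C(u_3)$ holds automatically, every color of $T_4$ then lies in $C(u_1)\cap C(u_2)\cap C(u_3)$, so $T_4\subseteq T_0$ and $t_0\ge t_4$. I would next split on whether $H$ contains a $(6/7,j)_{(u_3,v)}$-path for every $j\in T_3$, exactly as in the proof of Lemma~\ref{134inCu3}: in the branch where some $j\in T_3$ misses both a $(6,j)_{(u_3,v_6)}$- and a $(7,j)_{(u_3,v_7)}$-path, Corollary~\ref{auv2}(4) forces $5\in B_1$ together with a $(5,i)_{(u_3,u_i)}$-path, pushing $5$ into several $C(u_i)$; in the complementary branch I track the $(6/7,T_3)$-paths alongside Corollary~\ref{auv2}(5.1)--(5.2) to force $6,7\in S_u$ and extra common colors. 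In each branch the accumulated contributions, substituted into Eq.~(\ref{eq17}), give $\sum_{i\in[1,4]}d(u_i)\ge 2\Delta+14$, contradicting the degree bound Eq.~(\ref{eq16}) for ($A_7$)/($A_8$). This counting is the main obstacle, since it requires simultaneously bookkeeping $W_1,W_2$ and the multiplicities $\mathrm{mult}_{S_u}(\cdot)$ of the colors $1,\dots,7$ without double counting.

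For (2), fix (via (1)) a color $j^*\in T_4$ with no $(2,j^*)_{(u_2,u_4)}$-path. Because $j^*\in T_{uv}\subseteq B_1\cup B_3$ while $B_1\subseteq C(u_4)$ and $j^*\notin C(u_4)$, necessarily $j^*\in B_3$, so $H$ has a $(3,j^*)_{(u_3,v_3)}$-path and $j^*\in C(u_3)\cap C(v_3)$. I would then examine recoloring $uu_4\to j^*$: the only bichromatic cycles it can create pass through $uu_4$ and use a color $l\in\{1,2,3\}\cap C(u_4)$; the $l=1$ cycle dies at $v$ since $j^*\notin C(v)$, and the $l=2$ cycle is excluded by the choice of $j^*$, so the recoloring is blocked only by a $(3,j^*)_{(u_4,u_3)}$-path. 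When no such path exists the recoloring is safe, and Corollary~\ref{auv2}(3) with $\alpha=4$ yields $4\in B_1\cup B_3$ together with the alternative ``$4\in C(u_1)$ or a $(4,a_1)_{(u_4,u_1)}$-path with $a_1\in\{2,4\}\cap C(u_1)$''; unwinding the $a_1=2$ case and invoking Lemma~\ref{lemma06} converts this into $2/4\in C(u_1)$ with the asserted $(2,4)_{(u_1,u_2)}$-path, while $4\in B_1\setminus B_3$ is ruled out just as for $j^*$. The remaining case, in which a $(3,j^*)_{(u_4,u_3)}$-path does block the recoloring, is handled directly: that path together with the $(3,j^*)_{(u_3,v_3)}$-path pins down $C(u_3)$ and again forces $4\in B_3$.

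Finally (3) follows from $R_1=T_{uv}\setminus C(v_1)=T_4\ne\emptyset$: Lemma~\ref{auvge2}(1.2), equivalently Corollary~\ref{auv2d(u)5}(6) applied to $i=1$, yields for $j^*\in R_1$ a $(b_1,j^*)_{(u_4,v_{b_1})}$-path with $b_1\in D_1\setminus\{1\}=C(u_4)\cap\{3,5,6,7\}$; combining this with $4\in B_3$ from (2) and a Kempe interchange along the $(3,j^*)$-path I would locate a color $\gamma_0\in[5,7]\cap C(u_4)$ carrying a $(\gamma_0,4)_{(u_4,v_{\gamma_0})}$-path, using once more that recoloring $vu_4$ to a free color would otherwise drop $a_{uv}$ to $1$ and let Lemma~\ref{auv1coloring} finish. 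I expect (3) to be short once (2) is available; the genuinely new estimate, and hence the crux of the lemma, is the degree count in (1).
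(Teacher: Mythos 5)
Your overall plan for (1) — negate the claim, deduce $2\in C(u_4)$ and $T_4\subseteq C(u_1)\cap C(u_2)\cap C(u_3)$, then case-split and count — matches the paper's strategy, but the asserted outcome of the counting is wrong, and this is where the real content of the lemma lives. You claim that in each branch Eq.~(\ref{eq17}) yields $\sum_{i\in[1,4]}d(u_i)\ge 2\Delta+14$. It does not: we are in configuration ($A_8$) where $\sum_{i\in[1,4]}d(u_i)\le 2\Delta+13$ is \emph{permitted} when $\Delta=7$, and in every branch of the actual analysis the accumulated weight tops out at $2\Delta+12$ or $2\Delta+13$ (for instance, when $W_3=\{3,5,4,1\}$ one only reaches $2\Delta+13$, which forces $t_0=t_4$, $4\notin C(u_1)$, $t_3+t_4=\Delta-3$, and the branch is closed by the recoloring $(uu_1,uv)\to(C_{34},T_3)$; the branch $w_3=5$ is closed by $(uu_4,uv)\to(\{5,7\}\setminus C(u_4),4)$ reducing to case (2.1); the branch $1,2\notin C(u_3)$ ends at $d(u_1)+d(u_4)\ge\Delta+7$ with equality and is finished via \ref{31rho84B32T3}(2)). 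So pure counting cannot deliver the contradiction you assert; each branch needs a tailored recoloring or a reduction to an earlier case, and none of that is present in your sketch.

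Part (3) has a more localized but equally real gap. Applying Lemma~\ref{auvge2}(1.2) to $R_1=T_{uv}\setminus C(v_1)=T_4$ produces, for $j\in T_4$, a $(b_1,j)_{(u_4,v_{b_1})}$-path with $b_1\in[5,7]\cap C(u_4)$ — a path in the colors $b_1$ and $j$. The statement requires a $(\gamma_0,4)_{(u_4,v_{\gamma_0})}$-path, i.e.\ a path in the colors $\gamma_0$ and $4$, and the ``Kempe interchange along the $(3,j^*)$-path'' you invoke to convert one into the other is not a legitimate step (an interchange would destroy the structure you have just established). The correct argument is a one-line contrapositive recoloring: if $H$ contains no $(4,i)_{(u_4,v_i)}$-path for any $i\in[5,7]\cap C(u_4)$, then $(vu_4,uu_4,uv)\to(4,j,T_3)$ — with $j$ the color from (1) — gives an acyclic edge $(\Delta+5)$-coloring of $G$, so such a path must exist. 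Finally, in (2) the sub-case where a $(3,j^*)_{(u_4,u_3)}$-path blocks the recoloring $uu_4\to j^*$ is dispatched only with ``pins down $C(u_3)$ and again forces $4\in B_3$''; this does not establish the $2/4\in C(u_1)$ clause nor the $(2,4)_{(u_1,u_2)}$-path, so that half of (2) is also unproved as written.
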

\begin{proof}
For (1), assume that $H$ contains a $(2, T_4)_{(u_2, u_4)}$-path.
Then $2\in C(u_4)$, $T_4\subseteq C(u_2)\cap C(u_1)$.
One sees from \ref{315234} that $5/2/3/4\in C(u_1)$.

When $W_3 = 5$, it follows that $\Delta = 7$, $t_4 = 2$, $t_3 = 3$, and $T_3\cup T_4 = T_{uv}$.
Then $d(u_1) = \Delta$ and $d(u_2)\le 6$.
If $2, 4\not\in C(u_1)$,
one sees from~\ref{135Cu22424} that $5/1/3/4\in C(u_2)$ and then $T_3\setminus C(u_2)\ne \emptyset$,
then by \ref{13T3T12424}, we are done.
Otherwise, $2/4\in C(u_1)$.
Then assume w.l.o.g. $H$ contains a $(7, 8)_{(u_4, v_7)}$-path.
If $4\in C(u_1)$, then $(u u_1, u v)\overset{\divideontimes}\to (\{5, 7\}\setminus C(u_4), 8)$.
Otherwise, $2\in C(u_1)$ and it follows from \ref{315234} that $H$ contains a $(2, 5)_{(u_1, u_2)}$-path,
from \ref{31rho8234} that $H$ contains a $(2, 7)_{(u_1, u_2)}$-path, and $5, 7\in C(u_2)$.
One sees clearly that $T_3\setminus C(u_2)\ne \emptyset$.
It follows from Corollary~\ref{auv2d(u)5}(7.2) that $3\in C(u_2)$ and then $C(u_2) = \{2, 3, 5, 7\}\cup T_4$.
Then $(u u_4, u v)\to (\{5, 7\}\setminus C(u_4), 4)$ reduces the proof to (2.1.).

When $1, 2\not\in C(u_3)$,
it follows from \ref{131Cu33567} that $3, 5\in C(u_1)\cup C(u_4)$ and $6, 7\in C(u_1)\cup C(u_3)\cup C(u_4)$.
Let $|\{6, 7\}\cap C(u_3)| = \tau_{67}$.
One sees that $t_3 = \Delta - 2 - (d(u_3) - w_3) = \Delta - d(u_3) + w_3 - 2\ge w_3 - 2 = 3 + \tau_{67} - 2$.
Then $d(u_1) + d(u_4)\ge |\{1\}| + |\{4, 1, 2\}| + |\{3, 5\}| + |\{6, 7\}\setminus C(u_3)| + t_{u v} + t_0\ge 6 + 2 - \tau_{67} + \Delta - 2 + 3 + \tau_{67} - 2 = \Delta + 7$.
It follows that mult$_{C(u_1)\cup C(u_4)}(5) = 1$ and for each $i\in [6, 7]$, $i\not\in C(u_3)\cap (C(u_1)\cup C(u_4))$.
Thus, $H$ contains a $(7, 8)_{(u_3, v_7)}$-path and by \ref{31rho84B32T3}(2), we are done.

In the other case, $W_3 = \{3, 5, 4, 1\}$ or $W_3 = \{3, 5, 4, 2\}$.
One sees clearly from Corollary~\ref{auv2}(4)
that $5\in B_1$ and $H$ contains a $(5, i)_{(u_3, u_i)}$-path for some $i\in \{2, 4\}\cap C(u_3)$.
One sees that $|\{1, 5\}| + |\{2\}| + |\{4, 1/2, 3, 5\}| + |\{4, 1, 2, 5\}| + 2t_{u v} = 2\Delta + 7$,
and $t_0\ge t_4\ge 2$.

It follows from Corollary~\ref{auv2} (5.1) that if $2\not\in C(u_3)$ or $T_3\setminus C(u_2)\ne \emptyset$,
then $3, 6, 7\in S_u$ and $\sum_{x\in N(u)\setminus \{v\}}d(x)\ge 2\Delta + 7 + |\{3, 6, 7\}| + t_0\ge 2\Delta + 10 + t_4\ge 2\Delta + 10 + 2 = 2\Delta + 12$.

\begin{itemize}
\parskip=0pt
\item
     $W_3 = \{3, 5, 4, 1\}$.
     Then $3, 6, 7\in S_u$ and $\sum_{x\in N(u)\setminus \{v\}}d(x)\ge 2\Delta + 12$.
     It follows that $T_3\setminus C(u_2)\ne \emptyset$ and from \ref{13T3T12424} that $2\in B_1$.
     Hence, $\sum_{x\in N(u)\setminus \{v\}}d(x)\ge 2\Delta + 12 + |\{2\}| = 2\Delta + 13$.
     It follows that $t_0 = t_4$, $4\not\in C(u_1)$ and $t_3 + t_4 = \Delta - 3$.
     Then $(u u_1, u v)\overset{\divideontimes}\to (C_{34}, T_3)$\footnote{Or, since $T_1 = C_{34} \ne \emptyset$ and $4\not\in C(u_1)$, $T_1\setminus C(u_2)\ne \emptyset$, a contradiction to Corollary~\ref{auv2} (1.2). }.
\item
     $W_3 = \{3, 5, 4, 2\}$.
     When $T_3\setminus C(u_2)\ne \emptyset$,
     it follows from \ref{13T3T12424} that $2/4\in C(u_1)$.
     Hence, $\sum_{x\in N(u)\setminus \{v\}}d(x)\ge 2\Delta + 12 + |\{2/4\}| = 2\Delta + 13$.
     It follows that $t_0 = t_4$, $c_{34} = 1$, $C(u_1) = \{1, 5, 2/4\}$, $C(u_2) = \{2, 3, 6, 7\}\cup T_4$.
     Since $C_{34}\setminus (C(u_1)\cup C(u_2))\ne \emptyset$, $4\in C(u_1)$.
     Thus, by \ref{131Cu33Cu1}, we are done.
     In the other case, $T_3\subseteq C(u_2)$.
     Then $\sum_{x\in N(u)\setminus \{v\}}d(x)\ge 2\Delta + 7 + t_0\ge 2\Delta + 7 + t_3 + t_4\ge 2\Delta + 7 + 2  + 2 = 2\Delta + 11$.
     One sees clearly that $\{3, 6, 7\}\setminus S_u\ne \emptyset$.
     It follows from \ref{131Cu33Cu1} that $1\in C(u_2)$ and $H$ contains a $(1, 2)_{(u_2, u_3))}$-path.
     If $5\not\in C(u_2)$ and $2, 4\not\in C(u_1)$, then $u u_2\to 5$, $v u_3\in \alpha\in T_3$,
     and $u v\to 2$, or $(u u_1, u v)\to (2, T_3\setminus \{\alpha\})$.
     Otherwise, $5\in C(u_2)$ or $2/4\in C(u_1)$.
     Thus, $\sum_{x\in N(u)\setminus \{v\}}d(x)\ge 2\Delta + 11 + |\{1\}| + |\{5/2/4\}| = 2\Delta + 13$.
     It follows that $\Delta = 7$, $t_0 = t_3 + t_4 = 2 + 2 = 4$ and then $4\in C(u_1)$, $C(u_2) = \{1, 2\}\cup T_3\cup T_4$, $6\not\in S_u$.
     Then $(u u_4, u u_1)\to (6, C_{34})$ reduces the proof to (2.1.2.) and (1) holds.
\end{itemize}

By (1), it follows from Corollary~\ref{auv2}(5.2), (4) that (2) holds naturally.
For (3), if $H$ contains no $(4, i)_{(u_4, v_i)}$-path for each $i\in [5, 7]\cap C(u_4)$,
then $(v u_4, u u_4, u v)\overset{\divideontimes}\to (4, j, T_3)$.
Otherwise, there exists a $\gamma_0\in [5, 7]\cap C(u_4)$ such that $H$ contains a $(\gamma_0, 4)_{(u_4, v_{\gamma_0})}$-path.
\end{proof}

Assume $6\not\in C(u_1)\cup C(u_3)$ and $T_1\cup T_3 = T_{u v}$.
If $H$ contains no $(6, i)_{(u_i, u_j)}$-path for each $i\in \{2, 4\}$, $j\in \{1, 3\}$,
one sees that $u u_1\to 6$ or $u u_3\to 6$ reduces to an acyclic edge $(\Delta + 5)$-coloring for $H$ such that $|C(u)\cap C(v)| = 2$,
we may further assume that $H$ contains a $(6, T_3)_{(u_1, v_6)}$-path,
and a $(6, T_1\cup \{4\})_{u_3, v_6}$-path,
i.e., $C(v_6) = \{4, 6\}\cup T_{uv}$.
Then $(v v_6, u v)\overset{\divideontimes}\to (2, 6)$.
Hence, we proceed with the following proposition,
\begin{proposition}
\label{136B1B3}
For each $\eta\in \{6, 7\}$, if $\eta\not\in C(u_1)\cup C(u_3)$ and $T_1\cup T_3 = T_{u v}$,
then $H$ contains a $(\eta, i)_{(i, j)}$-path for some $i\in \{2, 4\}$, $j\in \{1, 3\}$.
\end{proposition}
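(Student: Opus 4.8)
The plan is to prove \ref{136B1B3} in its natural ``or otherwise we are done'' form: I fix $\eta\in\{6,7\}$, assume the two hypotheses $\eta\notin C(u_1)\cup C(u_3)$ and $T_1\cup T_3=T_{uv}$, and suppose toward the ``otherwise'' branch that $H$ contains \emph{no} $(\eta,i)_{(u_i,u_j)}$-path for any $i\in\{2,4\}$, $j\in\{1,3\}$; I will exhibit an acyclic edge $(\Delta+5)$-colouring of $G$. The two values $\eta=6,7$ are handled by the identical argument (both lie in $[5,7]\cap C(v)=\{5,6,7\}$ and are interchangeable, the colour $5=c(vu_3)$ being the only distinguished one), so I treat $\eta$ generically with $v_\eta$ denoting the neighbour of $v$ with $c(vv_\eta)=\eta$.

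First I would check that the no-path assumption makes both recolourings $uu_1\to\eta$ and $uu_3\to\eta$ admissible. Recolouring $uu_1\to\eta$ can only create a bichromatic cycle of colours $\{\eta,x\}$ with $x\in C(u)\setminus\{1\}=\{2,3,4\}$; the candidate with $x=3$ would require $\eta\in C(u_3)$, which is false, and the candidates with $x\in\{2,4\}$ are exactly the $(\eta,x)_{(u_1,u_x)}$-paths excluded by assumption, so the recolouring is safe. Symmetrically $uu_3\to\eta$ is safe (here the $x=1$ candidate is ruled out by $\eta\notin C(u_1)$). Each recolouring yields a fresh acyclic edge $(\Delta+5)$-colouring of $H$ with $A_{uv}=\{3,\eta\}$ (since $\eta\in C(v)=\{1,3,5,6,7\}$), still with $a_{uv}=2$ and the same $T_{uv}$.

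Next I would feed these two recoloured colourings back into the containment machinery. Since we are in the ``otherwise'' regime (no direct finishing move colours $G$), applying \ref{prop3209} and Corollary~\ref{auv2} to the colouring obtained from $uu_1\to\eta$ forces a $(\eta,j)_{(u_1,v_\eta)}$-path for every $j\in T_3$, and applying them to the colouring from $uu_3\to\eta$ forces a $(\eta,j)_{(u_3,v_\eta)}$-path for every $j\in T_1\cup\{4\}$, where $4\in C(u_3)$ by Lemma~\ref{134inCu3}. Because $T_1\cup T_3=T_{uv}$, the two families of paths together put $\{4,\eta\}\cup T_{uv}\subseteq C(v_\eta)$; as $|\{4,\eta\}\cup T_{uv}|=2+(\Delta+5-7)=\Delta\ge d(v_\eta)$, equality holds and $C(v_\eta)=\{4,\eta\}\cup T_{uv}$, so in particular $3\notin C(v_\eta)$ and $2\notin C(v_\eta)\cup C(v)$. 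I would then finish with $(vv_\eta,uv)\overset{\divideontimes}{\to}(2,\eta)$: recolour $vv_\eta\to2$ (legal since $2$ avoids both $C(v)$ and $C(v_\eta)$, and Lemma~\ref{lemma06} precludes a new $(2,\cdot)$-cycle), then set $uv\to\eta$, which is free at $u$ and now at $v$, giving the desired colouring of $G$.

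The main obstacle I anticipate is Step~3's invocation of the containment lemmas on the \emph{recoloured} colourings: I must verify that the paths they produce really terminate at the single neighbour $v_\eta$ (rather than at $v$ through an arbitrary neighbour), that the bookkeeping of which $T_{uv}$-colours migrate between $A_{uv}$, $U_{uv}$ and $V_{uv}$ under the swap $1\leftrightarrow\eta$ is consistent, and that no shorter finishing recolouring is available that would short-circuit the count. Once those path families are pinned to $v_\eta$, the cardinality squeeze forcing $C(v_\eta)=\{4,\eta\}\cup T_{uv}$ and the concluding $(vv_\eta,uv)\overset{\divideontimes}{\to}(2,\eta)$ step are routine.
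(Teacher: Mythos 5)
Your proposal is correct and follows essentially the same route as the paper: it establishes that both recolourings $uu_1\to\eta$ and $uu_3\to\eta$ are safe under the no-path assumption, uses the resulting colourings (with $A_{uv}=\{\eta,3\}$ resp.\ $\{1,\eta\}$) to force the $(\eta,T_3)_{(u_1,v_\eta)}$- and $(\eta,T_1\cup\{4\})_{(u_3,v_\eta)}$-path families, pins down $C(v_\eta)=\{4,\eta\}\cup T_{uv}$ by the same cardinality squeeze, and finishes with $(vv_\eta,uv)\overset{\divideontimes}\to(2,\eta)$, exactly as in the paper.
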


\begin{lemma}
\label{13T42Cu1}
\begin{itemize}
\parskip=0pt
\item[{\rm (1)}]
	$T_4\setminus C(u_1)\ne \emptyset$.
\item[{\rm (2)}]
	$2\in C(u_1)$, and $H$ contains a $(2, T_1)_{(u_1, u_2)}$-path.
\item[{\rm (3)}]
	$1\in C(u_2)$ or $H$ contains a $(1, 3)_{(u_2, u_3)}$-path.
\end{itemize}
\end{lemma}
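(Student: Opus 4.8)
The plan is to prove the three claims in order, each by the standard mechanism of this section: assume the claim fails, attempt a constant-size recolouring of the edges incident to $u$ or $v$, and show that it either yields an acyclic edge $(\Delta+5)$-colouring of $G$ (by colouring $uv$ with a colour outside $C(u)\cup C(v)$) or produces an acyclic edge $(\Delta+5)$-colouring of $H$ with $|C(u)\cap C(v)|\le 1$; once every such move is blocked, the blocking Kempe chains force structural facts that I push against the degree bound $\sum_{x\in N(u)\setminus\{v\}}d(x)\le 2\Delta+13$ coming from Eq.~\eqref{eq16}. Throughout I may use $4\in C(u_3)$ (Lemma~\ref{134inCu3}) and the three conclusions of Lemma~\ref{13non2jT4}: a colour $j\in T_4$ admitting no $(2,j)_{(u_2,u_4)}$-path, the facts $4\in B_3$ and $2/4\in C(u_1)$, and a colour $\gamma_0\in[5,7]\cap C(u_4)$ carrying a $(\gamma_0,4)_{(u_4,v_{\gamma_0})}$-path.

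For (1) I would argue by contradiction, assuming $T_4\subseteq C(u_1)$. Since $T_{uv}\subseteq C(u_3)\cup C(u_4)$ gives $T_4\subseteq C(u_3)$, we obtain $T_4\subseteq C(u_1)\cap C(u_3)$, so every colour of $T_4$ has multiplicity at least $2$ in $S_u$ and, once $4\in C(u_3)$ is taken into account, contributes to $t_0$. Feeding this together with $2/4\in C(u_1)$ and $4\in B_3$ into the inequality of Eq.~\eqref{eq17}, I expect $\sum_{x\in N(u)\setminus\{v\}}d(x)$ to reach $2\Delta+14$, contradicting Eq.~\eqref{eq16}; the complementary branch, where the count is not yet tight, is closed by recolouring $uu_4\to j$ for the witness $j$ of Lemma~\ref{13non2jT4}(1). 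This recolouring is proper ($j\notin C(u)\cup C(u_4)$), and its only possible new bichromatic cycles pass through $u_4$ with colours $(1,j)$, $(2,j)$ or $(3,j)$: the first is excluded because $j\notin C(u_1)$, the second by Lemma~\ref{13non2jT4}(1), so a genuine obstruction could only be a $(3,j)_{(u_3,u_4)}$-path, which the counting branch rules out; either way $uv\to 4$ then finishes $G$.

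For (2) the key point is that (1) furnishes a colour $j\in T_4\setminus C(u_1)\subseteq T_1$, so $T_1\neq\emptyset$ and Corollary~\ref{auv2}(1.2) applies: there is $a_1\in\{2,4\}\cap C(u_1)$ with a $(a_1,l)_{(u_1,u_{a_1})}$-path for every $l\in T_1$, and $T_1\subseteq C(u_2)\cup C(u_4)$. It remains to exclude $a_1=4$: if $4\in C(u_1)$ carried the $(4,l)_{(u_1,u_4)}$-paths, I would run a short recolouring along the $j$-chain across the pair $(uu_1,uu_4)$, certified legal by $j\notin C(u_1)$ together with the $(\gamma_0,4)_{(u_4,\cdot)}$-path of Lemma~\ref{13non2jT4}(3), to reach a colouring with $|C(u)\cap C(v)|\le 1$. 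This leaves $a_1=2$, i.e.\ $2\in C(u_1)$ with the required $(2,T_1)_{(u_1,u_2)}$-paths.

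Finally, for (3) I would again argue by contradiction, supposing $1\notin C(u_2)$ and that $H$ has no $(1,3)_{(u_2,u_3)}$-path. Picking the colour $j\in T_1$ supplied by (1)--(2), I would perform the simultaneous recolouring $(uu_2,uu_1)\to(1,j)$, which is proper because $1\notin C(u_2)$ and $j\notin C(u_1)$ and which frees colour $2$ at $u$ while keeping $C(u)\cap C(v)=\{1,3\}$. After this swap a $(2,j)$-cycle can no longer form (no edge at $u$ is coloured $2$), the $(1,3)_{(u_2,u_3)}$-cycle is excluded by hypothesis, and the few remaining threatened chains $(1,j)_{(u_1,u_2)}$, $(1,4)_{(u_2,u_4)}$, $(j,3)_{(u_1,u_3)}$, $(j,4)_{(u_1,u_4)}$ are dispatched using Lemma~\ref{lemma06} and the path structure recorded in (2); the resulting colouring of $H$ then extends to $G$ by $uv\to 2$ (legal since $2\notin C(v)$), a contradiction, so $1\in C(u_2)$ or the $(1,3)_{(u_2,u_3)}$-path exists. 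The main obstacle throughout is purely combinatorial bookkeeping: for each attempted recolouring I must certify the absence of every newly threatened bichromatic cycle (each governed by Lemma~\ref{lemma06}), I must let the single witness colour $j$ serve simultaneously in all three parts, and in (1) I must sharpen the multiplicity count in Eq.~\eqref{eq17} enough to reach $2\Delta+14$ rather than merely $2\Delta+13$, which is the delicate quantitative step.
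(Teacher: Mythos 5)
There is a genuine gap, and it sits in part (1), which is the heart of the lemma. Your plan for (1) is ``count to $2\Delta+14$, and close the residual branch by $uu_4\to j$ followed by $uv\to 4$'', but neither half works as stated. First, the counting: from $T_4\subseteq C(u_1)\cap C(u_3)$ you only get $\mathrm{mult}_{S_u}(j)\ge 2$ for $j\in T_4$, which is already guaranteed by Lemma~\ref{auvge2}(2); these colours do \emph{not} contribute to $t_0$ unless they also lie in $C(u_2)$, and the fact $4\in C(u_3)$ has no bearing on their multiplicity. In the paper's actual treatment of this branch the count only reaches $d(u_1)+d(u_4)\ge\Delta+7$ (resp.\ $\sum_{x}d(x)\ge 2\Delta+12$ or $2\Delta+13$), and closing the gap to a contradiction requires the full case split on whether $1,2\in C(u_3)$, then on whether $H$ contains a $(6/7,T_3)_{(u_3,v)}$-path, with recolourings certified by \ref{131Cu33567}, \ref{31rho84B32T3}, \ref{13T3T12424}, \ref{135Cu22424} --- none of which your sketch supplies. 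Second, the recolouring: under the standing hypothesis $T_4\subseteq C(u_1)$ the witness $j$ of Lemma~\ref{13non2jT4}(1) satisfies $j\in C(u_1)$, so your exclusion of the $(1,j)_{(u_1,u_4)}$-obstruction ``because $j\notin C(u_1)$'' contradicts your own assumption; and the finishing move $uv\to 4$ is illegal outright, since Lemma~\ref{13non2jT4}(2) gives $4\in B_3$, i.e.\ a $(3,4)_{(u_3,v_3)}$-path, which together with $uu_3$ and $vv_3$ closes a $(3,4)$-bichromatic cycle the moment $uv$ receives colour $4$. You import exactly this fact in your preamble and then propose the move it forbids.

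Parts (2) and (3) are closer to the mark. Your argument for (2) is essentially the paper's: once $T_4\setminus C(u_1)\ne\emptyset$, the witness $j\in T_4\cap T_1$ forces $a_1=2$ in Corollary~\ref{auv2}(1.2), because a $(4,j)_{(u_1,u_4)}$-path would require $j\in C(u_4)$, contradicting $j\in T_4$; your extra recolouring to ``exclude $a_1=4$'' is unnecessary. For (3), however, your swap $(uu_2,uu_1)\to(1,j)$ followed by $uv\to 2$ omits the obstructions that actually matter: a $(2,3)_{(u_3,v_3)}$-path (i.e.\ $2\in B_3$) and a $(1,2)$-path between the new $1$-neighbours $u_2$ and $v_1=u_4$ would each produce a bichromatic cycle through $uv$, and nothing in your list of ``threatened chains'' rules them out. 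The paper instead derives (3) directly from Corollary~\ref{auv2}(5.1) once (1) and (2) are established, which is the route you should take.
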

\begin{proof}
Assume that $T_4\subseteq C(u_1)$.
One sees from Lemma~\ref{13non2jT4}(2) that $2/4\in C(u_1)$.

When $1, 2\not\in C(u_3)$,
it follows from \ref{131Cu33567} that $3, 5\in C(u_1)\cup C(u_4)$ and $6, 7\in C(u_1)\cup C(u_3)\cup C(u_4)$.
Let $|\{6, 7\}\cap C(u_3)| = \tau_{67}$.
One sees that $t_3 = 3 + \tau_{67} - 2 = \tau_{67} - 1$.
Then $d(u_1) + d(u_4)\ge |\{1, 2/4\}| + |\{4, 1\}| + |\{3, 5\}| + |\{6, 7\}\setminus C(u_3)| + t_{uv} + t_0\ge 6 + 2 - \tau_{67} + \Delta - 2 + 3 + \tau_{67} - 2 = \Delta + 7$.
It follows that mult$_{C(u_1)\cup C(u_4)}(5) = 1$ and for each $i\in [6, 7]$, $i\not\in C(u_3)\cap (C(u_1)\cup C(u_4))$.
Thus, $H$ contains a $(7, 8)_{(u_3, v_7)}$-path and by \ref{31rho84B32T3}(2), we are done.
In the other case, $1/2\in C(u_3)$.
One sees from  $w_3 + w_4\ge 7$ that $c_{34}\le 2$ and $t_3 + t_4\ge \Delta - 4$, $w_1\le 4$.

{\bf Case 1.} There exists a $j\in T_3$ such that $H$ contains no $(i, j)_{(u_3, v_i)}$-path for each $i\in [6, 7]$.

It follows from Corollary~\ref{auv2}(3) that $5\in B_1$ and $H$ contains a $(5, i)_{(u_i, u_)}$-path for some $i\in \{2, 4\}\cap C(u_3)$.
Since $1, 5, 2/4\in C(u_1)$, $c_{34}\ge 1$, $\{6, 7\}\setminus C(u_1)\ne \emptyset$,
and $w_3 + w_4\le 8$, $w_3\le 5$.

Assume that $W_4 = \{1, 4, 5\}$.
If $H$ contains no $(2, j)_{(u_1, u_2)}$-path for some $j\in T_1$,
then $(u u_4, u u_1)\to (\{6, 7\}\setminus C(u_1), j)$ reduces the proof to (2.1.2.).
If $H$ contains no $(2, j)_{(u_2, u_3)}$-path for some $j\in T_3$,
then $(u u_4, u u_3)\to (\{6, 7\}\setminus C(u_1), j)$ reduces the proof to (2.2.2.).
Otherwise, $2\in C(u_1)\cap C(u_3)$ and $T_3\cup T_1 \subseteq C(u_2)$.

When $w_3 = 5$, it follows that $c_{34} = t_1 = 1$, $W_4 = \{4, 1, 5\}$,
$C(u_1) = \{1, 5, 2\}\cup T_3\cup T_4$, $d(u_2)\le 6$, $4\in C(u_2)$, $T_3\cup T_1\subseteq C(u_2)$,
and it follows from Corollary~\ref{auv2}(5.1) that $1/3\in C(u_2)$.
Thus, $d(u_2)\ge |\{2, 4, 1/3\}\cup T_3\cup T_1| = 7$, a contradiction.
In the other case, $w_3 = 4$ and $W_3 =\{3, 5, 4, 2/1\}$.

When $w_1 = 4$, it follows that $c_{34} = t_1 = 2$, $W_4 = \{4, 1, 5\}$, $2\in C(u_1)$,
and $T_3\cup T_1\subseteq C(u_2)$, $d(u_2)\le 6$.
Then $\sum_{x\in N(u)\setminus \{v\}}d(x)\ge |\{1, 2, 5\}| + |\{2\}| + |\{3, 5, 4, 2\}| + |\{1, 4, 5\}| + |\{4\}|+ 2t_{u v} + t_0 = 2\Delta + 8 + t_0\ge 2\Delta + 8 +  t_3 + c_{34}\ge 2\Delta + 8 + 2 + 2 = 2\Delta + 12$.
One sees that if $4\not\in C(u_1)$, then $1, 6, 7\in S_u$,
and if $H$ contains no $(1, 2)_{(u_2, u_3)}$-path, then $3, 6, 7\in S_u$.
It follows that $C(u_1) = \{1, 2, 4, 5\}\cup T_3\cup T_4$ and $C(u_2) = \{1, 2\}\cup T_3\cup C_{34}$.
Then $(u u_2, u u_3, u u_4)\to (5, T_3, 3)$ reduces the proof to (2.3.).

Hence, we assume $W_1 = \{1, 5, 2/4\}$.
We distinguish whether or not $2\in C(u_3)$:
\begin{itemize}
\parskip=0pt
\item
     $W_3 = \{3, 5, 4, 1\}$.
     If $H$ contains no $(2, i)_{(u_2, u_4)}$-path for some $i\in \{3, 6, 7\}\setminus C(u_4)$,
     then $(u u_4, u u_3)\to (i, T_3)$ reduces the proof to (2.2.2.).
     Otherwise, $2\in C(u_4)$.
     It follows that $W_4 = \{4, 1, 5, 2\}$, $C_{34} = T_1$, $c_{34} = t_1 = 1$, $d(u_i) = \Delta$, $i = 1, 3, 4$,
     $d(u_2)\le 6$, and $H$ contains a $(2, \{3, 6, 7\})_{(u_2, u_4)}$-path, $3, 6, 7\in C(u_2)$.
     One sees that $(T_1\cup T_3)\setminus C(u_2)\ne \emptyset$.
     It follows from \ref{13T3T12424} that $2\in B_1$.
     Then $C(u_1) = \{1, 5, 2\}\cup T_3\cup T_4$ and $C(u_2) = \{2, 3, 4, 6, 7\}\cup C_{34}$.
     Thus, by \ref{136B1B3}, we are done.
\item
     $W_3 = \{3, 5, 4, 2\}$.
     Since $|\{1, 5, 2/4\}| + |\{2\}| + |\{3, 5, 4, 2\}| + |\{4, 1, 5\}| + |\{6, 7, 3/1\}| + 2t_{uv} = 2\Delta + 10$, $(T_1\cup T_3\cup \{5\})\setminus C(u_2)\ne \emptyset$.
     It follows from \ref{135Cu22424} and \ref{13T3T12424} that $2\in C(u_1)$ or $H$ contains a $(2, 4)_{(u_1, u_4)}$-path.
     \begin{itemize}
     \parskip=0pt
     \item
          $4\in C(u_1)$. One sees that $2\in C(u_4)$.
          It follows that $C(u_1) = \{1, 4, 5\}\cup T_3\cup T_4$, $C(u_4) = \{4, 1, 2, 5\}\cup T_3\cup C_{34}$, $c_{34} = t_1 = 1$.
          If $H$ contains no $(2, 6)_{(u_1, u_4)}$-path,
          then $(u u_4, u u_1)\to (6, T_1)$ reduces the proof to (2.1.2.).
          Otherwise, $H$ contains a $(2, 6)_{(u_1, u_4)}$-path.
          Thus, by \ref{136B1B3}, we are done.
     \item
          $2\in C(u_1)$.
          It follows from Corollary~\ref{auv2}(5.1) that $1\in C(u_2)$.
          Since $|\{1, 5, 2\}| + |\{2, 1\}| + |\{3, 5, 4, 2\}| + |\{4, 1, 5\}| + |\{6, 7, 4\}| + 2t_{uv} + t_0= 2\Delta + 11 + t_0\ge 2\Delta + 11 + t_1\ge 2\Delta + 11 + 1 = 2\Delta + 12$, $T_3\setminus C(u_2)\ne \emptyset$.
          If $H$ contains no $(2, i)_{(u_2, u_4)}$-path for some $i\in \{3, 6, 7\}\setminus C(u_4)$,
          then $(u u_4, u u_3)\to (i, T_3\cap T_2)$ reduces the proof to (2.2.2.).
          Otherwise, $2\in C(u_4)$.
          It follows that $W_4 = \{4, 1, 5, 2\}$, $C_{34} = T_1$, $c_{34} = t_1 = 1$, $d(u_i) = \Delta$, $i = 1, 3, 4$, $d(u_2)\le 6$, and $H$ contains a $(2, \{3, 6, 7\})_{(u_2, u_4)}$-path, $3, 6, 7\in C(u_2)$.
          Thus, $d(u_2)\ge |[1, 4]\cup [6, 7]\cup T_1|\ge 7$, a contradiction.
     \end{itemize}
\end{itemize}

{\bf Case 2.} $H$ contains a $(6/7, T_3)_{(u_3, v)}$-path, and assume w.l.o.g. $H$ contains a $(7, 8)_{(u_3, v_7)}$-path.

When $w_3 = 6$, since $w_3 + w_4\ge 6 +3 = 9$,
$t_3 + t_4 = \Delta - 2$, $W_4 = \{4, 1, b_1\}$, $C(u_1) = \{1, a_1\}\cup T_3\cup T_4$, $d(u_i) = \Delta$, $i = 1, 3, 4$, and $d(u_2)\le 6$.
It follows from \ref{315234} and \ref{31rho8234} that $a_1 = 2$, $4\in C(u_2)$, and $5, 7\in C(u_2)$.
Since $2, 4, 5, 7\in C(u_2)$, $T_3\setminus C(u_2)\ne \emptyset$.
Then $(u u_4, u u_3, T_4)\overset{\divideontimes}\to (3, T_3\cap T_2, T_4)$.
In the other case, $W_3 =\{3, 5, 4, 7, 2/1\}$, and $H$ contains a $(7, T_3)_{(u_3, v_7)}$-path.

When $w_1 = 3$, it follows that $c_{34} = t_1 = 1$, $W_4 = \{4, 1, b_1\}$,
$d(u_i) = \Delta$, $i = 1, 3, 4$, and $d(u_2)\le 6$.
If $H$ contains no $(2, j)_{(u_2, u_3)}$-path for some $j\in T_3$,
then $(u u_4, u u_3, u v)\to (\{3, 5, 7\}\setminus (C(u_1)\cup C(u_4)), j, T_4)$.
If $H$ contains no $(2, j)_{(u_2, u_1)}$-path for some $j\in T_1$,
then $(u u_4, u u_1)\to ([5, 7]\setminus (C(u_1)\cup C(u_4)), j)$ reduces the proof to (2.1.).
Otherwise, $2\in C(u_1)\cap C(u_3)$ and $H$ contains a $(2, T_1)_{(u_2, u_1)}$-path, $(2, T_3)_{(u_2, u_3)}$-path.
If $4\in C(u_1)$, one sees that $H$ contains a $(4, \gamma_0)_{(u_4, v)}$-path,
it follows from \ref{315234} and \ref{31rho8234} that $5, 7\in C(u_2)$,
and thus $d(u_2)\ge |\{2, 5, 7\}\cup T_3\cup T_1|\ge 7$.
Otherwise, $4\in C(u_2)\setminus C(u_1)$.
It follows from Corollary~\ref{auv2}(5.1) that $1\in C(u_2)$,
and thus $d(u_2)\ge |\{2, 4, 1\}\cup T_3\cup T_1|\ge 7$.
Hence, we assume $W_1 = \{1, 2/4\}$.
We distinguish whether or not $2\in C(u_1)$:
\begin{itemize}
\parskip=0pt
\item
     $W_1 = \{1, 4\}$.
     One sees that $|[5, 7]\cap C(u_4)\le 2|$ and $H$ contains a $(4, \gamma_0)_{(u_4, v)}$-path.
     Thus, by \ref{315234} and \ref{31rho8234}, we are done.
\item
     $W_3 = \{1, 2\}$.
     Then $4\in C(u_2)$ and it follows from \ref{315234} and \ref{31rho8234} that $H$ contains a $(2, i)_{(u_1, u_2)}$-path for each $i\in \{5, 7\}\subseteq C(u_2)$.
     One sees that $|\{1, 2\}| + |\{2, 4, 5, 7\}| + |\{3, 5, 4, 7, 2\}| + 2t_{u v} = 2\Delta + 7$.
     Since $\sum_{x\in N(u)\setminus \{v\}}d(x)\ge 2\Delta + 7 + |\{4, 1, b_1\}| + |\{1/3\}| = 2\Delta + 11$,
     $T_3\setminus C(u_2)\ne \emptyset$.
     It follows Corollary~\ref{auv2}(5.1) that $3, 6\in S_u$.
     If there exists an $i\in \{5, 7\}\setminus C(u_4)$,
     then $(u u_4, u u_3, u v)\overset{\divideontimes}\to (i, T_3\cap T_2, T_4)$.
     Otherwise, $W_4 = \{4, 1, 5, 7\}$, $d(u_i) = \Delta$, $i = 1, 3, 4$.
     Then $\sum_{x\in N(u)\setminus \{v\}}d(x)\ge  2\Delta + 7 + |\{4, 1, 5, 7\}| + |\{3, 6\}| = 2\Delta + 13$,
     $1\not\in C(u_2)$ and $3\not\in C(u_1)\cup C(u_4)$.
     Thus, by \ref{131Cu33Cu1}, we are done.
\end{itemize}

Hence, $T_4\setminus C(u_1)\ne \emptyset$.
It follows from Corollary~\ref{auv2} (1.2) that $2\in C(u_1)$ and $H$ contains a $(2, T_1)_{(u_1, u_2)}$-path,
and from Corollary~\ref{auv2} (5.1) that $1\in C(u_2)$ or $H$ contains a $(1, 3)_{(u_2, u_3)}$-path.
\end{proof}

If $1\not\in C(u_3)$, $3\not\in C(u_1)\cup C(u_2)$, and $H$ contains no $(4, 3)_{(u_2, u_4)}$-path,
then $(u u_1, u u_2, u u_3)\to (T_4\cap T_1, 3, 1)$ reduces the proof to (2.4.1.).

Assume that $1, 2\not\in C(u_3)$.
It follows from \ref{131Cu33567} that $3, 5\in C(u_1)\cup C(u_4)$ and $6, 7\in C(u_1)\cup C(u_4)\cup C(u_3)$.
Let $i\in \{3\}\cup (\{6, 7\}\setminus C(u_3))$ with mult$_{S_u}(i) = 1$.
If $i\in C(u_4)$, one sees from \ref{131Cu33Cu1} that $H$ contains a $(4, i)_{(u_1, u_4)}$-path,
then $(u u_1, u u_2, u u_3)\to (T_4\cap T_1, i, 1)$ reduces the proof to (2.4.1.).
Hence, we proceed with the following proposition:
\begin{proposition}
\label{13367Cu1}
\begin{itemize}
\parskip=0pt
\item[{\rm (1)}]
	$1\in C(u_3)$, or $3\in C(u_1)\cup C(u_2)$, or $H$ contains a $(4, 3)_{(u_2, u_4)}$-path.
\item[{\rm (2)}]
	If $1, 2\not\in C(u_3)$, then for each $i\in \{3\}\cup (\{6, 7\}\setminus C(u_3))$ with mult$_{S_u}(i) = 1$,
$i\in C(u_1)$.
\end{itemize}
\end{proposition}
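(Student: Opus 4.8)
The plan is to treat both parts of the proposition as the usual ``or otherwise we are done'' reductions: assuming the stated conclusion fails, I would exhibit a recoloring of the three edges $uu_1,uu_2,uu_3$ of the shape $(uu_1,uu_2,uu_3)\to(T_4\cap T_1,\cdot,1)$ that turns $c(\cdot)$ into an acyclic edge $(\Delta+5)$-coloring of $H$ lying in the already-settled subcase (2.4.1.), which would finish the analysis. Throughout I keep the standing data of this branch: $d(u)=5$, $A_{uv}=\{1,3\}$, $U_{uv}=\{2,4\}$, $c(vu_3)=5$, $c(vu_4)=1$, $C(v)=\{1,3\}\cup[5,7]$, $T_{uv}=[8,\Delta+5]$, and $T_3,T_4\ne\emptyset$ with $8\in T_3$; recall also from \ref{prop3209} and Corollary~\ref{auv2} that $T_3\subseteq B_1\subseteq C(u_1)$ and $T_4\subseteq B_1\cup B_3\subseteq C(u_1)\cup C(u_3)$.

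For part (1) I would argue by contradiction. Suppose all three alternatives fail, i.e.\ $1\notin C(u_3)$, $3\notin C(u_1)\cup C(u_2)$, and $H$ has no $(4,3)_{(u_2,u_4)}$-path. Then recoloring $uu_3\to1$ is legal at $u_3$ (as $1\notin C(u_3)$), $uu_2\to3$ is legal at $u_2$ (as $3\notin C(u_2)$), and $uu_1\to j$ for a color $j\in T_4\cap T_1$ is legal at $u_1$ (as $j\notin C(u_1)$). I would first check $T_4\cap T_1\ne\emptyset$, and then rule out every candidate bichromatic cycle through the three recolored edges by Lemma~\ref{lemma06}: the hypothesis $3\notin C(u_1)\cup C(u_2)$ together with the absence of a $(4,3)_{(u_2,u_4)}$-path blocks a bichromatic cycle on $uu_2$, while $1\notin C(u_3)$ blocks one on $uu_3$. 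After relabeling the neighbours of $u$ so that the common colors again sit canonically, the new coloring is an instance of (2.4.1.) (where the $u_4$-side common color has been moved), so we are done; hence at least one of the three alternatives must hold.

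For part (2) I would start from $1,2\notin C(u_3)$. These two facts give $1\notin C(u_3)$ and, since $u_3$ then carries neither color $2$ nor any color of $\{1\}\cup T_3$, preclude a $(1,2)_{(u_2,u_3)}$-path and a $(2,j)_{(u_2,u_3)}$-path with $j\in T_3$; thus the hypotheses of \ref{131Cu33567} are met and $3,5\in C(u_1)\cup C(u_4)$, $6,7\in C(u_1)\cup C(u_4)\cup C(u_3)$. Now fix $i\in\{3\}\cup(\{6,7\}\setminus C(u_3))$ with $\mathrm{mult}_{S_u}(i)=1$; the displayed memberships force $i\in C(u_1)\cup C(u_4)$. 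If $i\in C(u_4)$, then $\mathrm{mult}_{S_u}(i)=1$ forces $i\notin C(u_1)$, so \ref{131Cu33Cu1} supplies a $(4,i)_{(u_1,u_4)}$-path, and the same recoloring $(uu_1,uu_2,uu_3)\to(T_4\cap T_1,i,1)$ reduces to (2.4.1.); consequently, unless we are already done, $i\notin C(u_4)$ and therefore $i\in C(u_1)$, as claimed.

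The hard part will be the acyclicity bookkeeping for the simultaneous three-edge recoloring, not the combinatorial membership deductions. Concretely, I expect the two delicate points to be (a) confirming $T_4\cap T_1\ne\emptyset$ so that a color is available for $uu_1$---here $T_3\subseteq C(u_1)$ and $T_4\subseteq C(u_1)\cup C(u_3)$ pin $T_4\cap T_1$ inside $C(u_3)\setminus C(u_1)$, and its nonemptiness must be read off from the degree and multiplicity inequalities of this branch---and (b) certifying via Lemma~\ref{lemma06} that recoloring $uu_1,uu_2,uu_3$ at once introduces no bichromatic cycle and that the outcome is genuinely the configuration (2.4.1.) after relabeling. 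Both parts share this recoloring, so once its validity is established the membership arguments for (1) and (2) are short.
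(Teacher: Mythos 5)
Your proposal matches the paper's own argument: part (1) is exactly the reduction $(uu_1,uu_2,uu_3)\to(T_4\cap T_1,3,1)$ to case (2.4.1.) under the failure of all three alternatives, and part (2) is exactly the paper's use of \ref{131Cu33567} to place $i$ in $C(u_1)\cup C(u_4)$ (resp.\ $C(u_1)\cup C(u_4)\cup C(u_3)$) followed by \ref{131Cu33Cu1} and the same recoloring $(uu_1,uu_2,uu_3)\to(T_4\cap T_1,i,1)$ to exclude $i\in C(u_4)$ when mult$_{S_u}(i)=1$. The two bookkeeping points you flag (nonemptiness of $T_4\cap T_1$ and the acyclicity of the simultaneous recoloring) are likewise left implicit in the paper, so your level of detail is consistent with its treatment.
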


\begin{lemma}
\label{13T3Cu22B1B3}
\begin{itemize}
\parskip=0pt
\item[{\rm (1)}]
	$T_3\setminus C(u_2)\ne \emptyset$.
\item[{\rm (2)}]
	$2\in B_1\cup B_3$, and if $2\not\in C(u_3)$, then $H$ contains a $(2, 4)_{(u_3, u_4)}$-path.
\item[{\rm (3)}]
	$3\in C(u_4)$, or $H$ contains a $(3, i)_{(u_4, u_i)}$-path for some $i\in \{1, 2\}$.
\end{itemize}
\end{lemma}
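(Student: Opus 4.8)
The plan is to argue each part by contradiction, using exactly the recolour-or-count machinery that drives the rest of subcase (2.4.2): whenever a claimed containment or path fails, I will either recolour a bounded set of edges incident to $u$ or $v$ to obtain a direct acyclic edge $(\Delta+5)$-colouring of $G$, to lower $a_{uv}$, or to reduce to one of the already-settled subcases (2.1)--(2.3) or (2.4.1); and when none of these succeeds, the forced colour-memberships are fed into Eq.~(\ref{eq17}) to push $\sum_{i\in[1,4]}d(u_i)$ above the ($A_8$) bound $2\Delta+13$, a contradiction. Throughout I may use the facts already secured in (2.4.2): $T_{uv}\subseteq C(u_3)\cup C(u_4)$ with $C(u_3)=W_3\cup T_4\cup C_{34}$ and $C(u_4)=W_4\cup T_3\cup C_{34}$; the membership $4\in C(u_3)$ (Lemma~\ref{134inCu3}); the facts $2\in C(u_1)$ with a $(2,T_1)_{(u_1,u_2)}$-path and $T_4\setminus C(u_1)\ne\emptyset$ (Lemma~\ref{13T42Cu1}); and the existence of some $j^*\in T_4$ carrying no $(2,j^*)_{(u_2,u_4)}$-path together with $4\in B_3$ and $2/4\in C(u_1)$ (Lemma~\ref{13non2jT4}).

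For (1) I would suppose $T_3\subseteq C(u_2)$. By Lemma~\ref{13T42Cu1} the $(2,T_1)_{(u_1,u_2)}$-path forces $T_1\subseteq C(u_2)$, and since $T_1\cap T_3=\emptyset$ by Corollary~\ref{auv2}(2), the vertex $u_2$ already carries $\{2\}\cup T_1\cup T_3$, so $t_0\ge t_1+t_3$. I then split on the dichotomy of Corollary~\ref{auv2}(4): either some $j\in T_3$ admits no $(6,j)_{(u_3,v_6)}$- or $(7,j)_{(u_3,v_7)}$-path, in which case $5\in B_1$ and colour $5$ lands on an extra neighbour, raising $\mathrm{mult}_{S_u}(5)$; or for every $j\in T_3$ there is such a $(6/7,j)_{(u_3,v)}$-path, which pins colours $6,7$ into $C(u_3)$ and, via Corollary~\ref{auv2}(5.1), forces $V_{uv}=[5,7]\subseteq S_u$ and $3\in S_u$. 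In both branches the accumulated memberships, substituted into Eq.~(\ref{eq17}), drive $\sum_{i\in[1,4]}d(u_i)$ to at least $2\Delta+14$, contradicting the ($A_8$) bound; hence $T_3\setminus C(u_2)\ne\emptyset$.

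For (2), the colour $2$ lies in $U_{uv}=\{2,4\}$, so I would apply Corollary~\ref{auv2}(3) to $\alpha=2$: since Lemma~\ref{13non2jT4} already exhibits a $T_4$-colour that $uu_2$ cannot absorb, the attempted recolouring of $uu_2$ toward $T_2$ cannot reduce $a_{uv}$, and this obstruction yields $2\in B_1\cup B_3$ together with $\mathrm{mult}_{S_u}(2)\ge2$. For the conditional, if $2\notin C(u_3)$ then, because $T_4\subseteq C(u_3)$ and $4\in C(u_3)$ by Lemma~\ref{134inCu3}, the colour $j^*\in T_4$ must be reached at $u_3$ through colour $4$, and Lemma~\ref{lemma06} applied to the resulting $(4,j^*)_{(u_3,u_4)}$-configuration delivers the required $(2,4)_{(u_3,u_4)}$-path. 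For (3), I would assume $3\notin C(u_4)$ and that $H$ has no $(3,i)_{(u_4,u_i)}$-path for $i\in\{1,2\}$; then a swap among $uu_2,uu_3,uu_4$ that moves the common colour $3$ leaves a valid acyclic edge $(\Delta+5)$-colouring of $H$ in which the two common colours are preserved but $T_4$ becomes re-routable, so Corollary~\ref{auv2d(u)5}(2.1) forces $R_i=\emptyset$ for $i\in\{1,3\}$, contradicting the non-emptiness recorded through \ref{13367Cu1} and Lemma~\ref{13T42Cu1}(1). This step mirrors the proof of Corollary~\ref{auv2d(u)5}(3).

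The main obstacle is the bookkeeping in (1): every membership I claim (the locations of $5,6,7$, the value of $a_3$, and whether $c_{34}$ is $0$ or positive) shifts the lower bound on $\sum_{i\in[1,4]}d(u_i)$ by exactly the margins separating $2\Delta+12$, $2\Delta+13$, and $2\Delta+14$, so the count must be run branch by branch with the $\mathrm{mult}_{S_u}$ and $t_0$ terms tracked precisely; a single slack colour can collapse the contradiction. A secondary difficulty is verifying that the edge swap in (3) is genuinely acyclic and, crucially, that it \emph{preserves} $a_{uv}$ rather than accidentally lowering it, since it is the preservation that lets Corollary~\ref{auv2d(u)5}(2.1) apply; arranging the swap to yield an instance compatible with (2.4.1) instead of merely a new two-common-colour configuration is the delicate point.
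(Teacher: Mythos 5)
Your overall strategy (contradiction, then recolour-or-count) and your dichotomy for part (1) via Corollary~\ref{auv2}(4) do match the paper's proof, which splits on whether some $j\in T_3$ lacks a $(6/7,j)_{(u_3,v)}$-path. But the pivotal counting step is wrong. From $T_3\subseteq C(u_2)$ you do get $T_3\subseteq T_0$ (each $j\in T_3$ lies in $C(u_1)$, $C(u_2)$ and $C(u_4)$), but a colour $j\in T_1$ is only guaranteed to lie in $C(u_2)$ (from the $(2,T_1)_{(u_1,u_2)}$-path) and in $C(u_3)$ (since $j\notin C(u_1)$ forces $j\in B_3$); whether $j\in C(u_4)$ is open at this stage, so $\mathrm{mult}_{S_u}(j)\ge 3$ and hence $j\in T_0$ is unjustified. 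Your inequality $t_0\ge t_1+t_3$ therefore fails — indeed the paper's proof of this lemma \emph{opens} by recording $T_1\setminus T_0\ne\emptyset$. Without that over-count, "both branches give $\ge 2\Delta+14$" is not available: in the paper several sub-branches of both cases cannot be closed by Eq.~(\ref{eq17}) at all and are instead killed by explicit recolourings (e.g.\ via \ref{13367Cu1}(1), \ref{136B1B3}, or direct $\overset{\divideontimes}\to$ moves), so a purely arithmetic finish is not just delicate but impossible in some branches.

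Parts (2) and (3) in the paper are then read off directly from Corollary~\ref{auv2}(5.2)/(3) and (5.1), using as trigger the multiplicity-$2$ witness $j\in T_3\setminus C(u_2)$ that part (1) supplies (such a $j$ lies only in $C(u_1)$ and $C(u_4)$, so recolouring $uu_2\to j$ is obstructed only through the machinery of those corollaries). Your part (2) feeds the wrong witness: the colour $j^*\in T_4$ from Lemma~\ref{13non2jT4} need not avoid $C(u_2)$, so it does not put $j^*\in T_2$ and cannot activate Corollary~\ref{auv2}(3) for $\alpha=2$; the conditional $(2,4)_{(u_3,u_4)}$-path likewise comes from that corollary applied to the $T_3\setminus C(u_2)$ witness, not from routing $j^*$ through $u_3$. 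Your part (3) replaces a one-line application of Corollary~\ref{auv2}(5.1) by an unspecified swap of $uu_2,uu_3,uu_4$ followed by Corollary~\ref{auv2d(u)5}(2.1); besides the swap not being pinned down enough to check acyclicity, the contradiction you aim for ($R_1=R_3=\emptyset$ versus $R_3\ne\emptyset$) relies on \ref{13R3neeptyset}, which the paper only establishes \emph{after} this lemma (it needs $4\in C(v_3)$, derived later), so it is not available here.
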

\begin{proof}
Assume that $T_3\subseteq C(u_2)$.
By Lemma~\ref{13T42Cu1}, $T_4\setminus C(u_2)\ne\emptyset$, i.e., $T_1\setminus T_0\ne \emptyset$.
It follows from Corollary~\ref{auv2}(5.1) that $1\in C(u_2)\cup C(u_3)$, $6, 7\in S_u$,
and if $1\not\in C(u_2)$, then $H$ contains a $(1, 3)_{(u_2, u_3)}$-path,
from Corollary~\ref{auv2d(u)5}(7.1) that if $2\not\in C(u_3)$, then $3\in S_u$.
One sees that $|\{1, 2\}| + |\{2\}| + |\{4, 3, 5\}| + |\{4, 1\}| + |\{4, 1, 6, 7\}| + 2t_{uv} = 2\Delta + 8$.

{\bf Case 1.} There exists a $j\in T_3$ such that $H$ contains no $(i, j)_{(u_3, v_i)}$-path for each $i\in [6, 7]$.

It follows from Corollary~\ref{auv2}(3) that $5\in B_1$ and $H$ contains a $(5, i)_{(u_i, u_)}$-path for some $i\in \{2, 4\}\cap C(u_3)$,
from \ref{135Cu22424} that $5\in C(u_2)$ or $2\in C(u_3)\cup C(u_4)$.
\begin{itemize}
\parskip=0pt
\item
     $2\not\in C(u_3)$.
     It follows from Corollary~\ref{auv2}(5.1) that $3\in S_u$.
     Then $\sum_{x\in N(u)\setminus \{v\}}d(x)\ge 2\Delta + 8 + |\{5, 5, 5/2, 3\}| + t_0 \ge 2\Delta + 12 + 1 = 2\Delta + 13$.
     It follows that $t_0 = t_3 = 1$, $W_3 = \{3, 5, 4\}$, $1\in C(u_2)$, mult$_{S_u}(i) = 1$, $i = 3, 6, 7$,
     and from \ref{13367Cu1}(2) that $3, 6, 7\in C(u_1)$.
     Thus, $d(u_1) + d(u_2) \ge |\{1, 2, 5, 3, 6, 7\}| + |\{2, 1\}| + |\{4\}| + t_{u v} + t_0\Delta + 8$, a contradiction.
\item
     $2\in C(u_3)$.
     Then $\sum_{x\in N(u)\setminus \{v\}}d(x)\ge 2\Delta + 8 + |\{5, 5, 2\}| + t_0 \ge 2\Delta + 11 + 2 = 2\Delta + 13$.
     It follows that $t_0 = t_3 = 2$, $1\not\in C(u_3)$ and $3\not\in S_u$, $2\not\in C(u_4)$.
     Thus, by \ref{13367Cu1}(1), we are done.
\end{itemize}

{\bf Case 2.} $H$ contains a $(6/7, T_3)_{(u_3, v)}$-path, and assume w.l.o.g. $H$ contains a $(7, 8)_{(u_3, v_7)}$-path.
\begin{itemize}
\parskip=0pt
\item
     $1, 2\not\in C(u_3)$.
     It follows from \ref{131Cu33567} that $3, 5\in C(u_1)\cup C(u_4)$, $6\in C(u_1)\cup C(u_4)\cup C(u_3)$,
     from \ref{31rho84B32T3}(2) that $7\in C(u_1)\cup C(u_4)$.
     Then $\sum_{x\in N(u)\setminus \{v\}}d(x)\ge 2\Delta + 8 + |\{3, 5, 7\}| + t_0 \ge 2\Delta + 11 + 2 = 2\Delta + 13$.
     It follows that $t_0 = t_3 = 2$, $W_3 = \{3, 5, 4, 7\}$, $1\in C(u_2)$,
     and mult$_{S_u}(i) = 1$, $i = 3, 6$, mult$_{S_u}(j) = 2$, $i = 5, 7$.
     It follows from \ref{13367Cu1}(2) that $3, 6\in C(u_1)$.
     One sees that $d(u_1) + d(u_2)$ $\ge |\{1, 2, 3, 6\}| + |\{2, 1\}| + |\{4\}| + t_{uv} + t_0\ge 7 + \Delta - 2 + t_0 = \Delta + 7$.
     It follows that $5, 7\in C(u_4)$, and from \ref{131Cu33Cu1} that $H$ contains a $(4, \{5, 7\})_{(u_1, u_4)}$-path.
     Thus, $H$ contains a $(4, i)_{(u_4, v_i)}$-path for some $i\in \{5, 7\}$, a contradiction.
\item
     $1/2\in C(u_3)$.
     When $3\not\in C(u_1)$, one sees from \ref{315234} and \ref{31rho8234} that $5, 7\in C(u_1)\cup C(u_2)\cup C(u_4)$,
     $\sum_{x\in N(u)\setminus \{v\}}d(x)\ge 2\Delta + 8 + |\{5, 7\}| + t_0 \ge 2\Delta + 10 + 3 = 2\Delta + 13$.
     It follows that $t_0 = t_3 = 3$, $3\not\in S_u$,
     and then from Corollary~\ref{auv2d(u)5}(7.1) that $2\in C(u_3)$.
     Then $1\not\in C(u_3)$, and thus, by \ref{13367Cu1}(1), we are done.
     Now we assume $3\in C(u_1)$.
     Since $1/3\in C(u_1)$, $1/2\in C(u_3)$,
     $\sum_{x\in N(u)\setminus \{v\}}d(x)\ge |\{1, 2, 3\}| + |\{2, 1/3\}| + |\{4\}| + |\{3, 5, 4, 7, 1/2\}| + |\{4, 1\}| +  |\{6\}| + t_{uv} + t_0\ge 2\Delta + 10 + t_0\ge 2\Delta + 13$,
     and $d(u_1) + d(u_2)\ge |\{1, 2, 3\}| + |\{2, 1/3\}| + |\{4\}| + t_{uv} + t_0\ge \Delta + 4 + t_0\ge \Delta + 7$.
     It follows that $t_0 = t_3 = 3$, $d(u_i) = \Delta = 7$, $i\in [1, 3]$, and $C(u_4) = \{1, 4, 6\}\cup T_3$.
     One sees that $H$ contains a $(4, 6)_{(u_4, v_6)}$-path.
     Thus, by \ref{136B1B3}, we are done.
\end{itemize}

Hence, $T_3\setminus C(u_2)\ne \emptyset$.
It follows from Corollary~\ref{auv2} (5.2) and (4) that $2\in B_1\cup B_3$,
and if $2\not\in C(u_3)$, then $H$ contains a $(2, 4)_{(u_3, u_4)}$-path,
and from Corollary~\ref{auv2} (5.1) that $3\in C(u_4)$, or $H$ contains a $(3, i)_{(u_4, u_i)}$-path for some $i\in \{1, 2\}$.
\end{proof}

Hereafter, we assume w.l.o.g. $8\in T_3\cap T_2$.
If $H$ contains a $(7, 8)_{(u_3, v_\rho)}$-path,
$7\not\in C(u_4)$ and $H$ contains no $(7, i)_{(u_4, u_i)}$-path for each $i\in [1, 2]$,
then $(u u_4, u u_3, u v)\overset{\divideontimes}\to (7, T_3\cap T_2, T_4)$.
If $5\not\in C(u_4)$ and $H$ contains no $(5, i)_{(u_4, u_i)}$-path for each $i\in [1, 2]$,
then $(u u_4, u u_3, u v)\overset{\divideontimes}\to (5, T_3\cap T_2, T_4)$.
If $j\in [6, 7]\setminus C(u_3)\cup C(u_4)$ and $H$ contains no $(j, i)_{(u_4, u_i)}$-path for each $i\in [1, 2]$,
then $(u u_4, u u_3, u v)\overset{\divideontimes}\to (j, T_3\cap T_2, T_4)$.
In the other case, together with \ref{131Cu33Cu1}, we proceed with the following proposition:
\begin{proposition}
\label{13567Cu45Cu2}
\begin{itemize}
\parskip=0pt
\item[{\rm (1)}]
	$5\in C(u_4)$, or $H$ contains a $(5, i)_{(u_4, u_i)}$-path for some $i\in [1, 2]$.
\item[{\rm (2)}]
	For each $j\in [6, 7]$, $i\in C(u_3)\cup C(u_4)$, or $H$ contains a $(j, i)_{(u_4, u_i)}$-path for some $i\in [1, 2]$.
\item[{\rm (3)}]
	If $H$ contains a $(7, 8)_{(u_3, v_j)}$-path,
    then $7\in C(u_4)$ or $H$ contains a $(7, i)_{(u_4, u_i)}$-path for some $i\in [1, 2]$.
\item[{\rm (4)}]
    Assume $1, 2\not\in C(u_3)$.
    If mult$_{S_u\setminus C(u_3)}(5) = 1$, then $5\in C(u_1)\cup C(u_4)$;
    and if $H$ contains a $(7, 8)_{(u_3, v_j)}$-path and mult$_{S_u\setminus C(u_3)}(7) = 1$,
    then $7\in C(u_1)\cup C(u_4)$.
\end{itemize}
\end{proposition}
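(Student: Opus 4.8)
The plan is to read parts (1)--(3) off as the contrapositives of the three recolouring moves displayed immediately above the proposition, and to obtain part (4) by feeding (1) and (3) into \ref{131Cu33Cu1} together with the multiplicity hypothesis. For (1) I would assume $5\notin C(u_4)$ and perform the move $(uu_4,uu_3,uv)\overset{\divideontimes}\to(5,T_3\cap T_2,T_4)$: fix a colour $8\in T_3\cap T_2$ (available since $T_3\setminus C(u_2)\ne\emptyset$ by \ref{13T3Cu22B1B3}(1) and $T_4\setminus C(u_1)\ne\emptyset$ by \ref{13T42Cu1}), set $uu_3\to 8$ and $uu_4\to 5$, and colour $uv$ from $T_4$, the last step being legal because $4\in C(u_3)$ by \ref{134inCu3} keeps the intermediate palettes proper and leaves $T_4$ nonempty. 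The only way this fails to be an acyclic $(\Delta+5)$-colouring of $G$ is that $uu_4\to 5$ closes a bichromatic cycle through $u$; after the move the palette of $u$ is $\{1,2,5,8\}$, and since $8\in T_3$ is absent from $C(u_3)$ no $(5,8)$-cycle can form, so the sole obstructions are a $(5,1)_{(u_4,u_1)}$-path or a $(5,2)_{(u_4,u_2)}$-path. Hence, whenever we are not already done, one of these exists, which is exactly (1). Parts (2) and (3) follow from the same template: for $j\in[6,7]$ the move $(uu_4,uu_3,uv)\overset{\divideontimes}\to(j,T_3\cap T_2,T_4)$ succeeds unless $j\in C(u_3)\cup C(u_4)$ or a $(j,i)_{(u_4,u_i)}$-path with $i\in[1,2]$ blocks $uu_4\to j$, and when a $(7,8)_{(u_3,v_j)}$-path is present the analogous move isolating colour $7$ gives (3).

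For (4) I would first note that the standing hypothesis $1,2\notin C(u_3)$ automatically excludes any $(1,2)_{(u_2,u_3)}$-path, since such a path would have to reach $u_3$ along an edge coloured $1$ or $2$. Therefore \ref{131Cu33Cu1} applies, and with $j=5$ it yields $5\in C(u_1)$ or a $(5,i)_{(u_1,u_i)}$-path with $i\in\{2,4\}$, and with $j=7$ the corresponding dichotomy for colour $7$. Suppose toward a contradiction that $\mathrm{mult}_{S_u\setminus C(u_3)}(5)=1$ while $5\notin C(u_1)\cup C(u_4)$, so the unique occurrence of $5$ lies in $C(u_2)$. Then $5\notin C(u_4)$ forces, by (1), a $(5,i')_{(u_4,u_{i'})}$-path with $i'\in\{1,2\}$, while $5\notin C(u_1)$ forces, by the instance of \ref{131Cu33Cu1} above, a $(5,i)_{(u_1,u_i)}$-path with $i\in\{2,4\}$. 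I would then play these two Kempe chains against the single $5$-edge supplied by $C(u_2)$: Lemma~\ref{lemma06} forbids two distinct maximal $(5,\cdot)$-paths from sharing an endpoint, so the two forced chains cannot coexist with $\mathrm{mult}=1$, giving the contradiction and hence $5\in C(u_1)\cup C(u_4)$. The same multiplicity argument, now fed the $(7,8)_{(u_3,v_j)}$-path hypothesis and (3), yields $7\in C(u_1)\cup C(u_4)$.

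The main obstacle is this last step of (4): converting the two path-existence disjunctions into a membership statement when the colour is known to occur exactly once. The recolouring verifications in (1)--(3) are essentially bookkeeping, requiring only that the already-established facts $4\in C(u_3)$, $2\in C(u_1)$, $T_3\setminus C(u_2)\ne\emptyset$ and $T_4\setminus C(u_1)\ne\emptyset$ keep every intermediate colouring proper and leave a legal colour for $uv$ (choosing the two recoloured colours distinct). By contrast, the multiplicity-one deduction must reconcile the chain emanating from $u_4$ (via (1)) with the chain emanating from $u_1$ (via \ref{131Cu33Cu1}), using the uniqueness of maximal bichromatic paths in Lemma~\ref{lemma06} to pin the single occurrence of $5$ (respectively $7$) at $u_1$ or $u_4$ rather than at $u_2$; fixing the orientations and endpoints of these chains precisely is where the care lies.
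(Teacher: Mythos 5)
Your proposal is correct in outline and takes the same route as the paper: parts (1)--(3) are read off as the contrapositives of the three $\overset{\divideontimes}{\to}$ moves displayed immediately above the proposition, and (4) comes from feeding (1) and (3) into \ref{131Cu33Cu1} (applicable because $1,2\notin C(u_3)$ kills any $(1,2)_{(u_2,u_3)}$-path) and observing that if $5\notin C(u_1)\cup C(u_4)$ the two forced chains would both be the deterministic $(2,5)$-walk issuing from the unique $5$-edge at $u_2$, which would then have to terminate at both $u_1$ and $u_4$ --- impossible; this is exactly the intended combination, and the multiplicity hypothesis is not even load-bearing there. Two of your supporting claims in (1) are wrong, however. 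First, ``$8\notin C(u_3)$'' does not exclude a $(5,8)$-cycle through $uu_4$ and $uu_3$: that cycle needs the path to leave $u_3$ by a $5$-edge and $u_4$ by an $8$-edge, and since in case (2.4.2) we have $T_3\subseteq C(u_4)$, both conditions are live; what actually kills it is that the only $5$-edge at $u_3$ is $u_3v$ (as $c(vu_3)=5$) while $8\in T_{uv}$ is absent from $C(v)$, so the walk dies at $v$. Second, ``$4\in C(u_3)$'' is irrelevant to colouring $uv$ from $T_4$; what is needed is $T_4\ne\emptyset$ and that the chosen colour avoids $8$, which holds because $8\in T_3\subseteq C(u_4)$ forces $8\notin T_4$. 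You also drop the new colour of $uv$ from $u$'s palette and do not check cycles through the recoloured $uu_3$; since the paper leaves all of this bookkeeping implicit behind the $\overset{\divideontimes}{\to}$ notation, these repairs do not change the verdict that your approach matches the paper's.
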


One sees that $2\in C(u_1)\cap (C(u_3)\cup C(u_4))$, $4\in C(u_3)\cap (C(u_1)\cup C(u_2))$, $1\in C(u_2)\cup C(u_3)$,
and $3, 6, 7\in S_u$.
It follows from \ref{13567Cu45Cu2}(1) that $5\in S_u\setminus C(u_3)$.
Then $|\{1, 2\}| + |\{2\}| + |\{3, 4, 5\}| + |\{4, 1\}| + |\{2, 4, 1, 3, 5, 6, 7\}| + 2t_{u v} = 2\Delta + 11$.
Since either $5\in B_1$, or $H$ contains a $(7, 8)_{(u_3, v_7)}$-path and it follows from \ref{13567Cu45Cu2}(3) that $7\in S_u\setminus C(u_3)$,
$\sum_{x\in N(u)\setminus \{v\}}d(x)\ge 2\Delta + 11 + |\{5/7\}| + t_0 \ge 2\Delta + 12 + c_{34}\ge 2\Delta + 12$.
Hence, $c_{34}\le 1$ and $\Delta \le 7$.

Since $4\in C(v_3)$, it follows from Corollary~\ref{auv2d(u)5}(6) that $R_3\ne \emptyset$ and then $5/6/7\in C(v_3)$.
Particularly, we proceed with the following proposition:
\begin{proposition}
\label{13R3neeptyset}
\begin{itemize}
\parskip=0pt
\item[{\rm (1)}]
	If $T_{uv} = T_3\cup T_4$, then $T_3\setminus C(v_3) \ne \emptyset$.
\item[{\rm (2)}]
	If $1/2\in C(v_3)$, then $r_3\ge 2$ and $T_3\setminus C(v_3)\ne \emptyset$.
\item[{\rm (3)}]
    For some $j\in [6, 7]\setminus (C(u_1)\cup C(u_3))$,
    if $uu_1\to 6$ and $(u u_1, u u_3)\to (3, 6)$ reduce to an acyclic edge $(\Delta + 5)$-coloring for $H$, then we are done by Corollary~\ref{auv2d(u)5}(2.1).
\end{itemize}
\end{proposition}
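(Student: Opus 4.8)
The plan is to prove all three parts by combining the membership $T_{uv}\subseteq B_1\cup B_3$ from \ref{prop3001} with the two containments special to the present case (2.4.2): since $c(vu_3)=5$ and $c(vu_4)=1$ we have $u_3=v_5$ and $u_4=v_1$, and $v_3$ is a genuinely distinct neighbour of $v$ (the one coloured $3$). By \ref{prop3001}, $B_3\subseteq C(u_3)\cap C(v_3)$ and $B_1\subseteq C(u_1)\cap C(v_1)=C(u_1)\cap C(u_4)$; in particular $B_3\subseteq C(v_3)$ and $B_1\subseteq C(u_4)$. Throughout I would use the facts established immediately above the statement, namely $R_3\ne\emptyset$, $4\in C(v_3)$ together with $5/6/7\in C(v_3)$, and $c_{34}\le 1$; recall also $t_{uv}=\Delta-2$ and that $C_{uv}=C(u)\cup C(v)=[1,7]$ partitions $[1,\Delta+5]$ against $T_{uv}$.

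For part (1) I would argue by contradiction: suppose $T_3\subseteq C(v_3)$ and pick $\alpha\in R_3=T_{uv}\setminus C(v_3)$ (nonempty). Then $\alpha\notin C(v_3)\supseteq T_3$, so $\alpha\notin T_3$, and under the hypothesis $T_{uv}=T_3\cup T_4$ this forces $\alpha\in T_4$, i.e. $\alpha\in C(u_3)\setminus C(u_4)$. But $\alpha\in T_{uv}\subseteq B_1\cup B_3$; since $B_3\subseteq C(v_3)$ and $\alpha\notin C(v_3)$ we get $\alpha\notin B_3$, hence $\alpha\in B_1\subseteq C(u_4)$, contradicting $\alpha\notin C(u_4)$. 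For part (2) I would first get $r_3\ge 2$ by a degree count: under $1/2\in C(v_3)$ the set $C(v_3)\cap C_{uv}$ contains the four distinct colours $3$, $4$, one of $5,6,7$, and one of $1,2$, so $|C(v_3)\cap C_{uv}|\ge 4$ and $r_3=t_{uv}-(d(v_3)-|C(v_3)\cap C_{uv}|)\ge(\Delta-2)-\Delta+4=2$. For the second assertion of (2) I again assume $T_3\subseteq C(v_3)$; then $R_3\subseteq T_{uv}\cap C(u_3)$, and exactly as in part (1) each $\alpha\in R_3$ lies in $B_1\subseteq C(u_4)$ (it cannot lie in $B_3\subseteq C(v_3)$), so $\alpha\in C(u_3)\cap C(u_4)\cap T_{uv}=C_{34}$. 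Thus $R_3\subseteq C_{34}$, whence $c_{34}\ge r_3\ge 2$, contradicting the established bound $c_{34}\le 1$.

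For part (3) the idea is a recolouring that relocates both common colours onto non-canonical edges so that Corollary~\ref{auv2d(u)5}(2.1) becomes applicable. For $j\in\{6,7\}\setminus(C(u_1)\cup C(u_3))$ the swap $(uu_1,uu_3)\to(3,j)$ produces a colouring $c'$ with $C(u)\cap C(v)=\{3,j\}$ in which colour $3$ sits on $uu_1$ and colour $j$ on $uu_3$, so the hypotheses $a\ne i_1$, $b\ne i_2$ of Corollary~\ref{auv2d(u)5}(2.1) are met; should $c'$ be acyclic, the corollary forces $T_{uv}\subseteq B_3\subseteq C(v_3)$, i.e. $R_3=\emptyset$, which clashes with $R_3\ne\emptyset$, so invoking (2.1) finishes the branch. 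The one delicate point is the bookkeeping in part (3): checking that the candidate $j$ makes the swaps proper and acyclic and that the relabelled edges match the exact hypothesis of Corollary~\ref{auv2d(u)5}(2.1); by contrast parts (1) and (2) fall out cleanly once $v_1=u_4$ and the containments $B_1\subseteq C(u_4)$, $B_3\subseteq C(v_3)$ are in hand, so I expect the main obstacle to be confined to the recolouring verification of (3).
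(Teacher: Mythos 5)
Your proof is correct and takes essentially the same route as the paper, which states this proposition without a separate argument precisely because it follows from the facts recorded just above it ($R_3\ne\emptyset$, $4\in C(v_3)$, $5/6/7\in C(v_3)$, $c_{34}\le 1$, $t_{uv}=\Delta-2$) combined with $T_{uv}\subseteq B_1\cup B_3$, $B_1\subseteq C(u_1)\cap C(u_4)$ (since $v_1=u_4$) and $B_3\subseteq C(u_3)\cap C(v_3)$. Your counting for (1)--(2) and your use of Corollary~\ref{auv2d(u)5}(2.1) to force $R_3=\emptyset$ and contradict $R_3\ne\emptyset$ in (3) are exactly the intended arguments, and the acyclicity of the swaps in (3) is part of the hypothesis, so nothing further needs to be verified there.
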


When $6, 7\in C(u_3)$, since $2\in C(u_3)\cup C(u_4)$,
it follows that $W_3\cup W_4 = [3, 7]\cup \{1, 4, b_1\}\cup \{2\}$, $T_{uv} = T_3\cup T_4$,
and mult$_{C(u_3)\cup C(u_4}(2) = 1$.
We distinguish whether or not $2\in C(u_3)$:
\begin{itemize}
\parskip=0pt
\item
     $2\in C(u_3)$.
     It follows from \ref{13T3Cu22B1B3}(3) that $3\in C(u_1)$,
     and from \ref{13567Cu45Cu2}(1) that $b_1 = 5$, i.e., $C(u_1)= [1, 3]\cup T_3$, $C(u_4) = \{1, 4, 5\}\cup T_3$. One sees that $5\not\in B_1$.
     Then $H$ contains a $(7, 8)_{(u_3, v_7)}$-path.
     Thus, by \ref{13567Cu45Cu2}(3) we are done.
\item
     $2\in C(u_4)$.
     One sees from \ref{131Cu33Cu1} that $6, 7\in S_u\setminus C(u_3)$.
     Since $\sum_{x\in N(u)\setminus \{v\}}d(x)\ge 2\Delta + 11 + |\{6, 7\}| + t_0 = 2\Delta + 13 + t_0$,
     it follows that mult$_{S_u\setminus C(u_3)}(i) = 1$, $i\in [5, 7]$, mult$_{S_u}(3) = 1$, $t_0 = 0$,
     and $H$ contains a $(7, 8)_{(u_3, v_7)}$-path.
     Further, by \ref{13367Cu1}(2), $3\in C(u_1)$, and by \ref{13567Cu45Cu2}(4), $5, 7\in C(u_1)\cup C(u_4)$.
     Then $W_1 = [1, 3]\cup \{\alpha\}$ and $W_4 = \{1, 2, 4\}\cup \{\beta\}$, where $\{\alpha, \beta\} = \{5, 7\}$. Thus, by \ref{131Cu33Cu1}, we are done.
\end{itemize}

In the other case, $\{6, 7\}\setminus C(u_3)\ne \emptyset$ and assume w.l.o.g. $6\not\in C(u_3)$.

\begin{lemma}
\label{1367124}
If $T_3\cup T_4 = T_{uv}$, then
\begin{itemize}
\parskip=0pt
\item[{\rm (1)}]
	for each $j\in [6, 7]\not\in C(u_3)$,
    $H$ contains a $(j, i)_{(u_3, u_i)}$-path for some $i\in \{1, 2, 4\}$, and mult$_{S_u}(j)\ge 2$.
\item[{\rm (2)}]
	$7\in C(u_3)$, $T_3\cap C(u_2) = \emptyset$, and mult$_{S_u}(i) = 2$, $i = 1, 2, 6$,
    mult$_{S_u}(5) + $mult$_{S_u}(7) = 4$.
\end{itemize}
\end{lemma}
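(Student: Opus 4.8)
The plan is to continue the recolor-or-count strategy of Case 2.4.2 under the standing hypothesis $T_3\cup T_4=T_{uv}$ (equivalently $c_{34}=0$, so $t_3+t_4=t_{uv}=\Delta-2$): for each recoloring of the edges incident with $u$ I will either exhibit an acyclic edge $(\Delta+5)$-coloring of $G$ (or reduce to an already-settled sub-case), or else conclude that the recoloring fails, which pins a color inside some $C(u_i)$. Accumulating these forced colors drives the degree sum $\sum_{i\in[1,4]}d(u_i)$ up to the threshold that configuration $(A_8)$ forbids, and since we have already reduced to $\Delta\le 7$ the threshold is $2\Delta+13$ for $\Delta=7$ and $2\Delta+12$ for $\Delta=6$.

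For part (1), fix $j\in\{6,7\}$ with $j\notin C(u_3)$. First I would show $H$ contains a $(j,i)_{(u_3,u_i)}$-path for some $i\in\{1,2,4\}$: if not, then $uu_3\to j$ is proper (as $j\notin C(u)\cup C(u_3)$) and introduces no bichromatic cycle (the missing $(j,i)$-path rules out every candidate cycle through $u$), so it yields an acyclic edge $(\Delta+5)$-coloring with $A_{uv}=\{1,j\}$, an instance handled earlier, and we are done. Granting the path, since $j\notin C(u_3)$ it leaves $u_3$ along its $i$-edge and reaches $u_i$ along a $j$-edge, so $j\in C(u_i)$, a first occurrence of $j$ in $S_u$. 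By \ref{13567Cu45Cu2}(2), as $j\notin C(u_3)$, either $j\in C(u_4)$ or $H$ has a $(j,i')_{(u_4,u_{i'})}$-path forcing $j\in C(u_{i'})$; Lemma~\ref{lemma06} separates the two path-ends so these occurrences lie at distinct indices, whence $\mathrm{mult}_{S_u}(j)\ge 2$.

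For part (2), I argue $7\in C(u_3)$ by contradiction. If $7\notin C(u_3)$ then $6,7\notin C(u_3)$, so part (1) gives $\mathrm{mult}_{S_u}(6),\mathrm{mult}_{S_u}(7)\ge 2$. Together with the facts already in hand above the lemma — $2\in C(u_1)\cap(C(u_3)\cup C(u_4))$, $4\in C(u_3)\cap(C(u_1)\cup C(u_2))$, $1\in C(u_4)$ (from $c(vu_4)=1$) with $1\in C(u_2)\cup C(u_3)$, and $5\in C(u_3)$ with $5\in S_u\setminus C(u_3)$ — this gives $\mathrm{mult}_{S_u}(c)\ge 2$ for $c\in\{1,2,4,5,6,7\}$ and $\mathrm{mult}_{S_u}(3)\ge 1$. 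Since $\sum_{j\in T_{uv}}\mathrm{mult}_{S_u}(j)\ge 2t_{uv}=2(\Delta-2)$ by Lemma~\ref{auvge2}(2), Eq.~(\ref{eq17}) yields $\sum_{i\in[1,4]}d(u_i)\ge 13+2(\Delta-2)+4=2\Delta+13$. For $\Delta=6$ this already violates the $(A_8)$ bound $2\Delta+12$; for $\Delta=7$ equality is forced, so every listed multiplicity equals its lower bound and $T_3\cap C(u_2)=\emptyset$, and I would spend any residual slack (an extra occurrence of $3$, or a color of $T_{uv}$ of multiplicity $3$) on a finishing recoloring — a contradiction. Hence $7\in C(u_3)$, and re-running the boundary count with $5,7\in C(u_3)$ and $6\notin C(u_3)$ delivers the stated equalities $\mathrm{mult}_{S_u}(1)=\mathrm{mult}_{S_u}(2)=\mathrm{mult}_{S_u}(6)=2$, $\mathrm{mult}_{S_u}(5)+\mathrm{mult}_{S_u}(7)=4$, and $T_3\cap C(u_2)=\emptyset$.

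The main obstacle is the distinctness bookkeeping: the count is tight at $2\Delta+13$, so each claimed multiplicity bound must be realized at a genuinely different neighbor $u_i$, which forces repeated use of Lemma~\ref{lemma06} to separate the ends of the bichromatic paths — in particular to certify that the $u_3$-path of part (1) and the $u_4$-path from \ref{13567Cu45Cu2}(2) do not share their $j$-occurrence. A secondary difficulty is confirming that every recoloring invoked in part (1) and in the $\Delta=7$ tightness analysis actually lands in a configuration already resolved (through Corollary~\ref{auv2d(u)5}(2.1) or an earlier sub-case of Case 2.4) rather than returning to the present one.
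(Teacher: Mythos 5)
Your overall recolor-or-count skeleton matches the paper's, and your part (2) is close in spirit, but there is a genuine gap at the heart of part (1). When no $(j,i)_{(u_3,u_i)}$-path exists for $i\in\{1,2,4\}$, the recoloring $uu_3\to j$ (say $j=6$) does produce a valid acyclic coloring with $C(u)\cap C(v)=\{1,6\}$ — but this is \emph{not} "an instance handled earlier": with $c(uu_3)=6\in A_{uv}$, $c(vu_3)=5\notin A_{uv}$ and $c(vu_4)=1\in A_{uv}$, it is exactly configuration (2.4.2) again with the color $3$ renamed to $6$, i.e.\ the very case whose analysis we are in the middle of. The paper exploits this self-similarity rather than terminating: it applies the propositions already established for (2.4.2) to the recolored instance to obtain the parallel facts $\{4\}\cup T_4\subseteq C(v_6)$ and $T_3\setminus C(v_6)\ne\emptyset$, then combines these with $T_3\subseteq C(v_3)\cup C(v_6)$ and a degree count on $d(v_3)+d(v_6)$ (split on whether $2\in C(u_4)$) to reach a contradiction. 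That entire second half of the argument is missing from your proposal. A related soft spot is your multiplicity claim: when the $(j,i)_{(u_3,u_i)}$-path goes to $u_4$ and the occurrence supplied by \ref{13567Cu45Cu2}(2) is also at $u_4$, Lemma~\ref{lemma06} does not separate them into distinct neighbors; the paper treats the residual case $\mathrm{mult}_{S_u}(j)=1$ with $j\in C(u_4)$ and a $(4,6)_{(u_3,u_4)}$-path by an explicit finishing recoloring, not by a distinctness argument.

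In part (2) your count only reaches $2\Delta+13$, which for $\Delta=7$ does not violate the $(A_8)$ bound $\sum_{i\in[1,4]}d(u_i)\le 2\Delta+13$, so the contradiction is not yet obtained; deferring the tight case to "residual slack" and an unspecified finishing recoloring is where the actual work lies. The paper avoids this by observing that $7\notin C(u_3)$ together with $6\notin C(u_3)$ kills every $(i,j)_{(u_3,v_i)}$-path with $i\in[6,7]$, so Corollary~\ref{auv2}(4) forces $5\in B_1\subseteq C(u_1)\cap C(u_4)$ in addition to $\mathrm{mult}_{S_u}(6)\ge 2$ and $\mathrm{mult}_{S_u}(7)\ge 2$; counting the second occurrences of all three of $5,6,7$ pushes the sum to $2\Delta+14$, a clean contradiction, after which the stated equalities fall out of the now-saturated inequality.
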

\begin{proof}
For (1), assume that $H$ contains no $(6, i)_{(u_3, u_i)}$-path for each $i\in \{1, 2, 4\}$.
One sees from \ref{13R3neeptyset}(1) that $T_3\setminus C(v_3)\ne \emptyset$,
and it follows from Lemma~\ref{auvge2}(1.2) that there exists a $b_3\in [6, 7]\cap C(v_3)$ such that $H$ contains a $(b_3, j)_{(v_3, v_{b_6})}$-path for some $j\in T_3\cap R_3$.
One sees that the same argument applies if $uu_3\to 6$,
and thus $\{4\}\cup T_4\subseteq C(v_6)$, $T_3\setminus C(v_6)\ne \emptyset$,
and it follows from Lemma~\ref{auvge2}(1.2) that there exists a $b_6\in \{3, 7\}\cap C(v_6)$ such that $H$ contains a $(b_6, j)_{(v_6, v_{b_6})}$-path for some $j\in T_3\cap R_6$.
One sees from Lemma~\ref{auvge2}(1.2) and Lemma~\ref{lemma06} that $T_3\subseteq C(v_3)\cup C(v_6)$.
\begin{itemize}
\parskip=0pt
\item
     $2\not\in C(u_4)$.
     One sees that $2\in B_3$ and thus $2\in C(v_3)\cap C(v_6)$.
     Since $d(v_3) + d(v_6)\ge t_{u v} + t_4 + |\{3, 4, 6/7, 2\}| + |\{6, 4, 3/7, 2\}| = \Delta + 6 + t_4\ge \Delta + 6 + 1 = \Delta + 7$,
     it follows that $t_4 = 1$, $W_4 = \{1, 4, b_1\}$, and $1\not\in C(v_3)$.
     Thus, by Lemma~\ref{13T3Cu22B1B3} (3), we are done.
\item
     $2\in C(u_4)$.
     One sees from Corollary~\ref{auv2d(u)5} (6) and Lemma~\ref{lemma06} that $2\in C(v_3)\cup C(v_6)$ or $5\in C(v_3)\cup C(v_6)$.
     Since $d(v_3) + d(v_6)\ge t_{u v} + t_4 + |\{3, 4, 6/7\}| + |\{6, 4, 3/7\}| + |\{2/5\}| = \Delta + 5 + t_4\ge \Delta + 5 + 2 = \Delta + 7$,
     it follows that $t_4 = 2$, $W_4 = \{1, 4, b_1, 2\}$, and $1\not\in C(v_3)\cup C(v_6)$.
     Recall that $H$ contains a $(b_1, 4)_{(u_4, v_{b_1})}$-path.
     Hence, $b_1\in \{5, 7\}$ and it follows that $3, 6\in C(u_2)$ and $H$ contains a $(2, i)_{(u_2, u_4)}$-path for each $i\in \{3, 6\}$.
     One sees that $|\{3, 6\}\cap C(u_1)|\le 1$.
     Then $(u u_1, u v)\to (6, T_3\cap R_6)$ if $6\not\in C(u_1)$,
     and $(u u_1, u u_3, uv)\to (3, 6, T_3\cap R_3)$ if $3\not\in C(u_1)$.
\end{itemize}

Now assume that mult$_{S_u}(j) = 1$.
If $H$ contains a $(j, i)_{(u_3, u_i)}$-path for some $i\in [1, 2]$,
then by \ref{13567Cu45Cu2}, we are done.
Otherwise, $j\in C(u_4)$ and $H$ contains a $(4, 6)_{(u_3, u_4)}$-path.
Then $(u u_1, u v)\overset{\divideontimes}\to (6, j)$ if $H$ contains no $(6, j)_{(u_1, v_6)}$-path for some $j\in T_3$, or else, $(u u_1, u u_2, u v)\overset{\divideontimes}\to (6, T_4\cap T_1, T_3)$.

For (2), since mult$_{S_u}(6)\ge 2$, $\sum_{x\in N(u)\setminus \{v\}}d(x)\ge 2\Delta + 11 + |\{5/7\}| + |\{6\}| + t_0 = 2\Delta + 13 + t_0$.
One sees that if $7\not\in C(u_3)$, then $5\in C(u_1)\cap C(u_4)$, mult$_{S_u}(7)\ge 2$,
and $\sum_{x\in N(u)\setminus \{v\}}d(x)\ge 2\Delta + 11 + |\{5, 7, 6\}| = 2\Delta + 14$.
Hence, $7\in C(u_3)$, $t_0 = 0$, and (2) holds.
\end{proof}

(2.4.2.1.) There exists a $j\in T_3$ such that $H$ contains no $(i, j)_{(u_3, v_i)}$-path for each $i\in [6, 7]$.
Then $5\in B_1\subseteq C(u_1)\cap C(u_4)$ and $H$ contains a $(5, i)_{(u_i, u_)}$-path for some $i\in \{2, 4\}\cap C(u_3)$.

When $c_{34} = 1$, it follows that $\Delta = 7$, mult$_{S_u}(i) = 1$, $i = 3, 6, 7$,
and mult$_{S_u}(j) = 2$, $j = 1, 2, 4$.
Since $2\in C(u_3)\cup C(u_4)$ and $w_3 + w_4\le 8$,
$\{6, 7\}\setminus (C(u_3)\cup C(u_4))\ne \emptyset$ and $6\not\in C(u_3)\cup C(u_4)$.
First, assume that $H$ contains a $(2, 6)_{(u_2, u_4)}$-path.
Then $2\in C(u_4)\setminus C(u_3)$.
One sees that $H$ contains no $(i, 6)_{(u_i, u_2)}$-path for each $i\in \{1, 3\}$.
If follows from \ref{131Cu33Cu1} that $1\in C(u_3)$,
from Lemma~\ref{13T42Cu1}(3) that $3\in C(u_2)$,
and then from Lemma~\ref{13T3Cu22B1B3}(3) that $H$ contains a $(2, 3)_{(u_2, u_4)}$-path.
Thus, by \ref{13R3neeptyset}(3), we are done.
Next, by \ref{13567Cu45Cu2}(2), $6\in C(u_1)$,
and for each $i\in [6, 7]\setminus (C(u_3)\cup C(u_4))$,
assume that $i\in C(u_1)$ and $H$ contains a $(1, i)_{(u_1, u_4)}$-path.
One sees that $H$ contains no $(6, i)_{(u_i, u_3)}$-path for each $i\in \{1, 2, 4\}$,
and the same argument applies if $uu_3\to 6$.
We further assume that $3\not\in C(u_2)$ and it follows from Lemma~\ref{13T42Cu1}(3) that $1\in C(u_2)$.
\begin{itemize}
\parskip=0pt
\item
     $3\in C(u_4)$.
     It follows from \ref{13367Cu1}(1) that $4\in C(u_2)$ and $H$ contains a $(4, 3)_{(u_2, u_4)}$-path,
     and then from \ref{131Cu33Cu1} that $2\in C(u_3)$.
     Hence, $C(u_3) = [2, 5]\cup T_4\cup C_{34}$, $W_4 = \{1\}\cup [3, 5]$, and $C(u_1) = [1, 2]\cup [5, 7]\cup T_3$.
     One sees clearly that $T_4\cup T_{34}\subseteq B_3\subseteq C(v_3)$ and $T_3\setminus C(v_3) \ne \emptyset$.
     Then $(u u_1, u u_3, u v)\overset{\divideontimes}\to (3, 6, T_3\cap R_3)$.
\item
     $3\in C(u_1)$ and $H$ contains a $(1, 3)_{(u_1, u_4)}$-path.
     If $2\in C(u_3)$, then $\{2\}\cup T_4\cup C_{34}\subseteq B_3$ with $t_4 + c_{34} = 3$,
     thus $d(v_3)\ge |\{3, 1, 4, 2, 5/6/7\}\cup T_4\cup C_{34}|\ge 8$.
     Otherwise, $2\in C(u_4)\setminus C(u_3)$.
     One sees that $1, 2\not\in C(u_3)$ and from \ref{13367Cu1}(2) that $7\in C(u_1)$.
     Then $T_4\cup C_{34}\subseteq B_3$ with $t_4 + c_{34} = 4$,
     thus $d(v_3)\ge |\{3, 1, 4, 5/6/7\}\cup T_4\cup C_{34}|\ge 8$.
\end{itemize}

In the other case, $t_3 + t_4 = \Delta - 2$ and by \ref{1367124}(2) that $T_3\cap C(u_2) = \emptyset$, mult$_{S_u}(i) = 2$, $i = 1, 2, 6$, mult$_{S_u}(5) = 3$ and mult$_{S_u}(7) = 1$.
If $H$ contains no $(3, 7)_{(u_2, u_3)}$-path, then $(u u_2, u u_1, u v)\to (7, T_4\cap T_1, T_3)$.
Otherwise, $H$ contains a $(3, 7)_{(u_2, u_3)}$-path and $3\in C(u_2)$.
It follows from Lemma~\ref{13T3Cu22B1B3}(2) that $H$ contains a $(2, 3)_{(u_2, u_3)}$-path, $2\in C(u_4)$,
and from \ref{131Cu33Cu1} that $1\in C(u_3)$, i.e., $W_3 = \{1, 7\}\cup [3, 5]$, $W_4 = \{1, 2, 4, 5\}$.
Then $(u u_1, u u_3, u v)\overset{\divideontimes}\to (6, 3, T_3\cap R_3)$.

(2.4.2.2.) $H$ contains a $(7, T_3)_{(u_3, v)}$-path.
When $c_{34} = 1$, it follows that $W_3\cup W_4 = \{3, 5, 4, 7\}\cup \{1, 4, b_1\}\cup \{2\}$.
If $2\in C(u_3)$, one sees from Lemma~\ref{13T3Cu22B1B3}(3) that $3\in C(u_1)$, and $[5, 7]\setminus (C(u_1)\cup C(u_4))\ne \emptyset$,
then by \ref{13567Cu45Cu2} (1--3), we are done.
Otherwise, $2\in C(u_4)$.
One sees that $1, 2\not\in C(u_3)$.
It follows from \ref{13367Cu1}(2) that $3, 5, 6\in C(u_1)$,
and $W_1 = \{1, 2, 3, 5, 6\}$, $b_1 = 7$.
Then $(u u_1, u u_3, u v)\overset{\divideontimes}\to (7, 1, 8)$.

In the other case, $t_3 + t_4 = \Delta - 2$ and by \ref{1367124}(2) that $T_3\cap C(u_2) = \emptyset$, mult$_{S_u}(i) = 2$, $i = 1, 2, 6$, mult$_{S_u}(5) = 2$ and mult$_{S_u}(7) = 2$.
One sees clearly that $6\in C(u_1)\cup C(u_4)$ and $H$ contains a $(6, T_3\cap R_3)_{(v_3, v_6)}$-path.
If $6\not\in C(u_1)$ and $H$ contains no $(6, i)_{(u_1, u_i)}$-path for each $i\in \{2, 4\}$,
then $(u u_1, u v)\to (6, T_3\cap R_3)$.
Otherwise, $6\in C(u_1)$ or $H$ contains a $(6, i)_{(u_1, u_i)}$-path for some $i\in \{2, 4\}$.
\begin{itemize}
\parskip=0pt
\item
     There exists a $j\in \{5, 7\}\cap C(u_2)$.
     One sees from \ref{13567Cu45Cu2} (1) and (3) that $H$ contains a $(2, j)_{(u_2, u_4)}$-path, $2\in C(u_4)$.
     It follows from \ref{315234} or \ref{31rho8234} that $3\in C(u_1)$,
     and from \ref{13T42Cu1}(3) that $1\in C(u_2)$.
     One sees that $1, 2\not\in C(u_3)$ and $H$ contains no $(j, i)_{(u_1, u_i)}$-path for each $i\in \{2, 4\}$.
     Thus, by \ref{131Cu33Cu1}, we are done.
\item
     $5, 7\in C(u_1)\cup C(u_4)$.
     If $H$ contains a $(2, 3)_{(u_2, u_4)}$-path, i.e., $3\in C(u_2)$ and $2\in C(u_4)$,
     it follows from \ref{131Cu33Cu1} that $1\in C(u_3)$,
     and from \ref{315234} and \ref{31rho8234} that $W_1 = [1, 2]\cup \{5, 7\}$,
     $W_4 = \{1, 2, 4, 6\}$,
     thus by \ref{13non2jT4} (3), $H$ contains a $(4, 6)_{(u_4, v_6)}$-path,
     and by Lemma~\ref{1367124}(1), we are done.
     Otherwise, $3\in C(u_1)\cup C(u_4)$ and it follows from \ref{131Cu33Cu1} that $1\in C(u_2)$.
     \begin{itemize}
     \parskip=0pt
     \item
         $2\not\in C(u_3)$.
         It follows from \ref{13367Cu1}(2) that $3\in C(u_1)$.
         One sees from \ref{131Cu33Cu1} that for each $j\in \{5, 7\}$, $j\not\in C(u_4)$,
         or $H$ contains a $(4, j)_{(u_1, u_4)}$-path.
         It follows from \ref{13non2jT4} (3) that $H$ contains a $(4, 6)_{(u_4, v_6)}$-path.
         Thus, by Lemma~\ref{1367124}(1), we are done.
     \item
         $2\in C(u_3)$.
         It follows that $C(u_1) = \{1, 2, \alpha_1, \alpha_2\}\cup T_3$, $C(u_4) = \{1, 4, \beta_1, \beta_2\}\cup T_3$, $C(u_2) = \{1, 4, 6\}\cup T_4$,
         where $\{\alpha_1, \alpha_2, \beta_1, \beta_2\} = \{3\}\cup [5, 7]$.
         If $5, 7\in C(u_1)$, one sees that $H$ contains a $(2, 6)_{(u_1, u_2)}$-path,
         from \ref{13non2jT4} (3) that $H$ contains a $(4, 6)_{(u_4, v_6)}$-path.
         then, by Lemma~\ref{1367124}(1), we are done.
         Otherwise, there exists a $j\in \{5, 7\}\setminus C(u_1)$,
         and then $(u u_1, u u_2, u u_3, u v)\overset{\divideontimes}\to (j, T_3, 1, T_4)$.
     \end{itemize}
\end{itemize}

(2.4.3.) $c(vu_4) = 6$.
Let $C(u_1) = W_1\cup T_3\cup C_{13}$ and $C(u_3) = W_3\cup T_1\cup C_{13}$.
If $6\not\in C(u_1)$ and $H$ contains no $(6, i)_{(u_1, u_i)}$-path for each $i\in [2, 3]$,
then $u u_1\to 6$ reduces the proof to (2.4.2.).
If $6\not\in C(u_3)$ and $H$ contains no $(6, i)_{(u_3, u_i)}$-path for each $i\in [1, 2]$,
then $u u_3\to 6$ reduces the proof to (2.4.1.).
Otherwise, we proceed with the following proposition:
\begin{proposition}
\label{136236126inCu1Cu3}
\begin{itemize}
\parskip=0pt
\item[{\rm (1)}]
	$6/2/3\in C(u_1)$,
    and if $6\not\in C(u_1)$, then $H$ contains a $(6, i)_{(u_1, u_i)}$-path for some $i\in [2, 3]$.
\item[{\rm (2)}]
	$6/1/2\in C(u_3)$,
    and if $6\not\in C(u_3)$, then $H$ contains a $(6, i)_{(u_3, u_i)}$-path for some $i\in [1, 2]$.
\item[{\rm (3)}]
	$6\in C(u_1)\cup C(u_3)$, $T_3\ne \emptyset$ and there exists a $a_3\in \{2, 4\}\cap C(u_3)$.
\end{itemize}
\end{proposition}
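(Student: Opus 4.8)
The plan is to obtain parts (1) and (2) directly from the two recoloring reductions stated immediately before the proposition, and then to build part (3) on top of them. First I would treat (1): if its hypothesis failed, i.e. $6\notin C(u_1)$ and $H$ had no $(6,i)_{(u_1,u_i)}$-path for $i\in[2,3]$, then recoloring $uu_1\to 6$ would give a valid acyclic $(\Delta+5)$-coloring of $H$. The only colours that could close a new bichromatic cycle through the recoloured edge are $2,3,4\in C(u)\setminus\{1\}$; the value $4$ is harmless, because the unique $6$-coloured edge at $u_4$ is $vu_4$ and $4\notin C(v)=\{1,3,5,6,7\}$, so no $(6,4)$-alternating path can enter $u_4$ to close a cycle through $uu_4$. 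Thus only $i\in\{2,3\}$ can block, and after the recolouring the common colours become $\{3,6\}$ with $c(vu_4)=6$ and $c(vu_3)=5$, which is exactly the already-settled configuration (2.4.2.) up to relabelling; hence we would be done. Consequently, in the surviving case $6\in C(u_1)$, or a blocking $(6,i)_{(u_1,u_i)}$-path with $i\in\{2,3\}$ exists; for such a path the induced $(6,i)$-cycle forces the first path edge at $u_1$ to have colour $i$, so $i\in C(u_1)$ and $6/2/3\in C(u_1)$. Part (2) is symmetric: recolour $uu_3\to 6$ and reduce to (2.4.1.).

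For (3) I would argue the three assertions separately. The inclusion $6\in C(u_1)\cup C(u_3)$ I would prove by contradiction: assuming $6\notin C(u_1)\cup C(u_3)$, parts (1)--(2) supply blocking paths $(6,i)_{(u_1,u_i)}$ and $(6,j)_{(u_3,u_j)}$. If $i=3$ the $(6,3)$-cycle would force the edge entering $u_3$ to be coloured $6$, contradicting $6\notin C(u_3)$; hence $i=2$, and symmetrically $j=2$. Then a $(6,2)_{(u_1,u_2)}$-path and a $(6,2)_{(u_3,u_2)}$-path both reach $u_2$ through its unique $6$-coloured edge, while $6\notin C(u_1)\cup C(u_3)$ makes $u_1$ and $u_3$ endpoints of the $(6,2)$-subgraph; this contradicts the uniqueness of maximal bichromatic paths recorded in Lemma~\ref{lemma06}. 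For $T_3\neq\emptyset$ I would rule out $C(u_3)=\{3,5\}\cup T_{uv}$ by a recolouring of the edges at $u_3$, in the spirit of the opening reductions of (2.4.2.), that reduces to an earlier subcase. Finally, with $T_3\neq\emptyset$, Corollary~\ref{auv2}(1.2) applied with $i_1=3$ (so $u_{i_1}=u_3$ and $U_{uv}=\{2,4\}$) yields the required $a_3\in\{2,4\}\cap C(u_3)$ together with a $(a_3,j)_{(u_3,u_{a_3})}$-path.

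The main obstacle will be the inclusion $6\in C(u_1)\cup C(u_3)$: it is the only assertion not handed to us by a one-step recolouring, and its proof hinges on pinning down the endpoints and the terminal colours of the blocking Kempe chains precisely enough to invoke Lemma~\ref{lemma06}. A secondary, more bookkeeping-heavy point is to verify that each recolouring ($uu_1\to 6$, $uu_3\to 6$, and the one establishing $T_3\neq\emptyset$) genuinely lands in a configuration already disposed of earlier; this requires matching the post-recolouring common-colour set and the roles of the triangle vertices $u_3,u_4$ against the hypotheses of (2.4.1.), (2.4.2.), and, for $T_3$, a case such as (2.2.1.).
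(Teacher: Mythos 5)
Your proposal is correct and follows essentially the same route as the paper: parts (1) and (2) are exactly the two recoloring reductions ($uu_1\to 6$ landing in (2.4.2.), $uu_3\to 6$ landing in (2.4.1.)) that the paper states immediately before the proposition, your observation that the colour $4$ cannot block because $4\notin C(v)$ is the right reason the paper restricts to $i\in[2,3]$ (resp.\ $i\in[1,2]$), and your derivation of $6\in C(u_1)\cup C(u_3)$ from the fact that two distinct degree-one vertices $u_1,u_3$ cannot both lie on the single $(2,6)$-component through $u_2$'s unique $6$-edge is the intended argument. One simplification you missed: $T_3\ne\emptyset$ requires no new recoloring at all. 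If $T_3=\emptyset$ then $C(u_3)=\{3,5\}\cup T_{uv}$, so $1,2,6\notin C(u_3)$, which directly contradicts part (2) (equivalently, $3,5\in W_3$ together with $6/1/2\in C(u_3)$ gives $w_3\ge 3$, hence $t_3=\Delta-d(u_3)+w_3-2\ge w_3-2\ge 1$). Your proposed detour via a recoloring of the edges at $u_3$ ``in the spirit of (2.4.2.)'' can be made to work (the move $(vu_3,uu_3)\to(4,5)$ is valid here and lands in a (2.4.1.)-type configuration), but it is unverified in your sketch and unnecessary; the rest, including the appeal to Corollary~\ref{auv2}(1.2) with $i_1=3$, $U_{uv}=\{2,4\}$ to produce $a_3$, is fine.
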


Assume $5\not\in C(u_4)$ and $H$ contains no $(5, i)_{(u_4, u_i)}$-path for each $i\in [1, 2]$.
If $H$ contains no $(2, j)_{(u_2, u_3)}$-path for some $j\in T_3$,
then $(u u_4, u u_3)\to (5, j)$ reduces the proof to (2.4.1.).
Further, assume $H$ contains no $(5, 3)_{(u_4, u_i)}$-path.
If $4\not\in B_1\cup B_3$, then $(u u_4, u v)\overset{\divideontimes}\to (5, 4)$;
if $4\not\in C(u_3)$ and $H$ contains no $(2, 4)_{(u_2, u_3)}$-path,
then $(u u_4, u u_3)\to (5, 4)$ reduces the proof to (2.4.1.);
if $4\not\in C(u_1)$ and $H$ contains no $(2, 4)_{(u_2, u_1)}$-path,
then $(u u_4, u u_1)\to (5, 4)$ reduces the proof to (2.1.1.).
Otherwise, we proceed with the following proposition:
\begin{proposition}
\label{135Cu42T34B1B3}
Assume $5\not\in C(u_4)$ and $H$ contains no $(5, i)_{(u_4, u_i)}$-path for each $i\in [1, 2]$.
\begin{itemize}
\parskip=0pt
\item[{\rm (1)}]
	$H$ contains a $(2, T_3)_{(u_2, u_3)}$-path and $2\in C(u_3)$, $T_3\subseteq C(u_2)$.
\item[{\rm (2)}]
	If $H$ contains no $(5, 3)_{(u_4, u_3)}$-path, then $4\in B_1\cup B_3$,
    and for each $i\in \{1, 3\}$, $2/4\in C(u_i)$, and if $4\not\in C(u_i)$,
    then $H$ contains a $(2, 4)_{(u_2, u_i)}$-path.
\end{itemize}
\end{proposition}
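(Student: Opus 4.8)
The plan is to prove this exactly in the style of the neighbouring propositions in Case~(2.4.3.): a \emph{recolour-or-reduce} dichotomy in which the proposition simply records the complementary situation after every available recolouring has been shown to either finish $G$ or collapse to an earlier case. The standing engine is the hypothesis $5\notin C(u_4)$ together with the absence of $(5,i)_{(u_4,u_i)}$-paths for $i\in\{1,2\}$: this says that recolouring the edge $uu_4$ from $4$ to $5$ introduces no bichromatic cycle through colours $5$ and $1$ or $5$ and $2$. Each attempted move is a partial recolouring of edges incident to $u$; by Lemma~\ref{lemma06} its validity reduces to the non-existence of one specific two-coloured path, and when that path is present its endpoint colours are read off to yield the membership conclusions. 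I would use throughout that, in this subcase, $A_{uv}=\{1,3\}$, $C(v)=\{1,3\}\cup[5,7]$, and $T_{uv}\subseteq B_1\cup B_3\subseteq C(u_1)\cup C(u_3)$ from Corollary~\ref{auv2} and \ref{auv1}, so in particular every $j\in T_3=T_{uv}\setminus C(u_3)$ lies in $B_1\subseteq C(u_1)\cap C(v_1)$.

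For part~(1), I would first attempt the recolouring $(uu_4,uu_3)\to(5,j)$ for an arbitrary $j\in T_3$. The recolouring $uu_4\to 5$ is safe by the standing hypothesis, and since $j\in B_1$ has already been placed at $u_1$ and $v_1$, the only bichromatic cycle that $uu_3\to j$ could create is a $(2,j)$-cycle, i.e.\ the obstruction is exactly a $(2,j)_{(u_2,u_3)}$-path. Hence if for \emph{some} $j\in T_3$ there is no such path, the move is valid and, as in the surrounding text, reduces to Case~(2.4.1.). In the remaining case $H$ contains a $(2,j)_{(u_2,u_3)}$-path for \emph{every} $j\in T_3$. Since $j\notin C(u_3)$, such a path must leave $u_3$ along colour $2$, forcing $2\in C(u_3)$, and it must reach $u_2$ along colour $j$, forcing $j\in C(u_2)$; ranging over $j$ gives the $(2,T_3)_{(u_2,u_3)}$-path together with $2\in C(u_3)$ and $T_3\subseteq C(u_2)$, which is precisely~(1).

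For part~(2) I would add the hypothesis that there is no $(5,3)_{(u_4,u_3)}$-path. Now no $(5,i)_{(u_4,u_i)}$-path exists for any $i\in\{1,2,3\}$, so $uu_4\to 5$ is fully safe and frees colour $4$ at $u$ (as $C(u)$ becomes $\{1,2,3,5\}$). If $4\notin B_1\cup B_3$, then colouring $uv\to 4$ is legal, since $4\notin C(v)=\{1,3,5,6,7\}$ and $4\notin B_1\cup B_3$ rules out a bichromatic cycle; this yields $(uu_4,uv)\overset{\divideontimes}\to(5,4)$, so otherwise $4\in B_1\cup B_3$. Next, for each $i\in\{1,3\}$ I would attempt $(uu_4,uu_i)\to(5,4)$: with $uu_4\to 5$ safe, the move $uu_i\to 4$ is blocked only by a $(2,4)_{(u_2,u_i)}$-path, so if $4\notin C(u_i)$ and no such path exists the move is valid and reduces to Case~(2.4.1.) for $i=3$ and to Case~(2.1.1.) for $i=1$. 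In the complementary case either $4\in C(u_i)$ directly, or the $(2,4)_{(u_2,u_i)}$-path exists and (again by the endpoint-colour argument at $u_i$) forces $4\in C(u_i)$; in all situations $2/4\in C(u_i)$, and when $4\notin C(u_i)$ the blocking $(2,4)_{(u_2,u_i)}$-path is present, which is exactly~(2).

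The main obstacle I anticipate is the acyclicity bookkeeping rather than any counting: for each candidate recolouring one must argue that the \emph{single} two-coloured path named in the trigger is the only possible new bichromatic cycle. This is where the earlier structural facts do the work, namely $T_{uv}\subseteq B_1\cup B_3$ and the associated memberships from Corollary~\ref{auv2}, which pin down where each transferred colour already occurs, together with Lemma~\ref{lemma06} to exclude competing maximal bichromatic paths. The secondary care point is the endpoint-colour analysis that upgrades each ``path exists'' statement to a clean colour-set membership; getting the two endpoints right (colour $2$ at $u_3$ versus colour $j$ at $u_2$, and colour $4$ at $u_i$) is what makes the stated conclusions tight, and I would verify these before invoking the reductions to Cases~(2.4.1.) and~(2.1.1.).
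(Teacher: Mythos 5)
Your proposal is correct and follows essentially the same recolour-or-reduce argument as the paper: the same moves $(uu_4,uu_3)\to(5,j)$, $(uu_4,uv)\overset{\divideontimes}\to(5,4)$, $(uu_4,uu_3)\to(5,4)$ and $(uu_4,uu_1)\to(5,4)$, with the same reductions to Cases~(2.4.1.) and~(2.1.1.) and the same complementary conclusions. One parenthetical slip: the blocking $(2,4)_{(u_2,u_i)}$-path ends at $u_i$ with colour $2$, so its existence forces $2\in C(u_i)$ rather than $4\in C(u_i)$; your final conclusion $2/4\in C(u_i)$ is nevertheless exactly what the proposition asserts.
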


Assume that $5\not\in C(u_2)$ and $H$ contains no $(5, i)_{(u_2, u_i)}$-path for each $i\in \{1, 3, 4\}$.
Let $u u_2\to 5$.
If $2\not\in B_1\cup B_3$, then $uv \overset{\divideontimes}\to 2$.
If $2\not\in C(u_1)$ and $H$ contains no $(2, 4)_{(u_1, u_2)}$-path,
then $(u u_1, u v)\overset{\divideontimes}\to (2, T_3)$.
If $H$ contains no $(4, j)_{(u_3, u_4)}$-path for some $j\in T_3\cup (\{2\}\setminus C(u_3))$ and there exists a $l\in T_1\setminus (C(u_2)\cap C(u_3))$,
then $(u u_3, u v)\overset{\divideontimes}\to (j, l)$.
Otherwise, we assume the following proposition:
\begin{proposition}
\label{135notCu2134}
Assume that $5\not\in C(u_2)$ and $H$ contains no $(5, i)_{(u_2, u_i)}$-path for each $i\in \{1, 3, 4\}$.
Then $2\in B_1\cup B_3$, and
\begin{itemize}
\parskip=0pt
\item[{\rm (1)}]
	$2\in C(u_1)$ or $H$ contains a $(2, 4)_{(u_1, u_2)}$-path;
\item[{\rm (2)}]
    if $H$ contains no $(j, 4)_{(u_3, u_4)}$-path for some $j\in T_3\cup (\{2\}\setminus C(u_3))$,
    then $T_1\subseteq C(u_2)\cap C(u_3)$. 	
\end{itemize}
\end{proposition}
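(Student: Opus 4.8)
The plan is to continue the recolouring scheme of Subsection~3.2.2, starting from the acyclic edge $(\Delta+5)$-colouring $c(\cdot)$ of $H$ fixed in Case~(2.4.3): here $A_{uv}=\{1,3\}$, $U_{uv}=\{2,4\}$, $c(vu_3)=5$, $c(vu_4)=6$, and $uvu_3$ is a triangle (so the colour-$5$ neighbour of $v$ is the triangle vertex $u_3$). Under the standing hypothesis that $5\notin C(u_2)$ and that $H$ has no $(5,i)_{(u_2,u_i)}$-path for $i\in\{1,3,4\}$, recolouring $uu_2\to5$ creates no bichromatic cycle, so it produces another acyclic edge $(\Delta+5)$-colouring of $H$ in which colour $2$ is now free at $u$ (and $2\notin C(v)$). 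All three assertions will be extracted as contrapositives of recolouring attempts: whenever an attempt succeeds it either extends the colouring to an acyclic edge $(\Delta+5)$-colouring of $G$ (the $\overset{\divideontimes}\to$ outcome) or reduces $a_{uv}$, after which the inductive hypothesis closes that branch; the proposition then records precisely what must hold when every attempt is blocked.

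For the unconditional conclusion I would first try $uv\to2$. By \ref{prop3001} the colours that can seed a $(2,c)$-cycle through $uv$ are confined to the shared colours, and the potential obstruction at the newly shared colour $5$ is controlled by Lemma~\ref{lemma06} applied at the triangle vertex $u_3$; hence if $2\notin B_1\cup B_3$ the colouring $uv\to2$ is acyclic and we are done, so blocking this branch forces $2\in B_1\cup B_3$. For part~(1) I would instead attempt $uu_1\to2$ followed by $uv\to T_3$. Recolouring $uu_1\to2$ is legitimate exactly when $2\notin C(u_1)$ and $H$ has no $(2,4)_{(u_1,u_2)}$-path (the other colour classes at $u$ being pinned down by the running hypotheses of Case~(2.4.3)), and then any colour of $T_3=T_{uv}\setminus C(u_3)$ is free at $u_3$ and closes no cycle on $uv$. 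Thus the only obstructions to $(uu_1,uv)\overset{\divideontimes}\to(2,T_3)$ are $2\in C(u_1)$ or a $(2,4)_{(u_1,u_2)}$-path, which is assertion~(1).

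For part~(2) I would work under its hypothesis that some $j\in T_3\cup(\{2\}\setminus C(u_3))$ admits no $(j,4)_{(u_3,u_4)}$-path, so that recolouring $uu_3\to j$ is safe. For the freed edge I would then pick $l\in T_1\setminus(C(u_2)\cap C(u_3))$; since $l\notin C(u_1)$ and $l$ avoids $C(u_2)\cap C(u_3)$, the only candidate bichromatic cycles through $uv$ at the triangle are severed, so $(uu_3,uv)\overset{\divideontimes}\to(j,l)$ succeeds. Consequently, if this branch is blocked then no such $l$ exists, i.e.\ $T_1\subseteq C(u_2)\cap C(u_3)$, which is assertion~(2).

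The main obstacle is not locating the free colours but certifying that each recolouring genuinely avoids a \emph{bichromatic cycle} once two simultaneous changes have been made at $u$ (the persistent $uu_2\to5$ together with the second recolouring of $uu_1$ or $uu_3$). The delicate point is again the triangle $uvu_3$: because $u_3$ carries colour $3$ from $u$ and colour $5$ from $v$, a recolouring can resurrect a short $(2,5)$- or $(j,5)$-cycle that an off-triangle count would overlook. I expect to handle these exactly as in the companion propositions \ref{135Cu22424} and \ref{135Cu42T34B1B3}: by repeated appeals to Lemma~\ref{lemma06} to fix the maximal bichromatic paths, together with the inclusion $T_{uv}\subseteq B_1\cup B_3$ from \ref{prop3001} and the accumulated structural facts of Case~(2.4.3) (and, where needed, Corollary~\ref{auv2d(u)5}), to confirm that each successful branch is acyclic and that the blocked branches leave exactly the stated gaps.
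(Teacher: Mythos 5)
Your proposal is correct and follows essentially the same route as the paper: after the safe recolouring $uu_2\to 5$, the unconditional claim, part~(1), and part~(2) are obtained exactly as in the text as the contrapositives of the blocked attempts $uv\overset{\divideontimes}\to 2$, $(uu_1,uv)\overset{\divideontimes}\to(2,T_3)$, and $(uu_3,uv)\overset{\divideontimes}\to(j,l)$ with $l\in T_1\setminus(C(u_2)\cap C(u_3))$. The only cosmetic difference is your attribution of the vanishing colour-$5$ obstruction to Lemma~\ref{lemma06} at $u_3$; the cleaner reason is that after $uu_2\to 5$ the vertex $u_2$ carries no edge coloured $2$, so the $(2,5)$-subgraph cannot close a cycle through $uu_2$ — but this does not affect the argument.
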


If $5\not\in C(u_1)\cup C(u_2)$, $H$ contains no $(5, i)_{(u_2, u_i)}$-path for each $i\in [3, 4]$,
and $H$ contains no $(4, j)_{(u_1, u_4)}$-path for some $j\in T_1$,
then $(u u_2, u u_1, u v)\overset{\divideontimes}\to (5, j, T_3)$.
If $5\not\in C(u_4)\cup C(u_1)$, $H$ contains no $(5, i)_{(u_4, u_i)}$-path for each $i\in [2, 3]$,
and $H$ contains no $(2, j)_{(u_1, u_2)}$-path for some $j\in T_1$,
then $(u u_4, u u_1, u v)\overset{\divideontimes}\to (5, j, T_3)$.
Otherwise, we proceed with the following proposition:
\begin{proposition}
\label{135notCu1}
\begin{itemize}
\parskip=0pt
\item[{\rm (1)}]
	If $5\not\in C(u_2)\cup C(u_1)$ and $H$ contains no $(5, i)_{(u_2, u_i)}$-path for each $i\in [3, 4]$,
    then $4\in C(u_1)$, $H$ contains a $(4, T_1)_{(u_1, u_4)}$-path, and $T_1\subseteq C(u_4)$.
\item[{\rm (2)}]
	If $5\not\in C(u_4)\cup C(u_1)$ and $H$ contains no $(5, i)_{(u_4, u_i)}$-path for each $i\in [2, 3]$,
    then $2\in C(u_1)$, $H$ contains a $(2, T_1)_{(u_1, u_2)}$-path, and $T_1\subseteq C(u_2)$.
\end{itemize}
\end{proposition}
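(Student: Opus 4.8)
The plan is to derive both parts as the contrapositives of the two recolouring claims displayed immediately above the proposition, so that the real work splits into (a) justifying those two $\overset{\divideontimes}\to$-recolourings and (b) reading off the three membership statements from the \emph{failure} of the recolourings. Throughout I work inside the standing ``or otherwise we are done'' convention of Case~(2.4.3): I assume $G$ cannot already be coloured and must therefore establish the stated conclusions.

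For part~(1) I would start from the hypotheses $5\notin C(u_1)\cup C(u_2)$ and the absence of a $(5,i)_{(u_2,u_i)}$-path for $i\in[3,4]$, and first check that recolouring $uu_2\to 5$ keeps the colouring of $H$ acyclic: since $5\notin C(u)\cup C(u_2)$ the edge $uu_2$ becomes the unique $5$-edge at both its ends, so any new bichromatic cycle through it forces a $(5,c)_{(u_2,u_i)}$-path with $c=c(uu_i)$, and the hypotheses (together with the fact that after the companion recolouring $uu_1\to j$ the colour of $uu_1$ is $j\neq 1$, so no $1$-edge remains at $u$) rule out every admissible $c$. With $uu_2\to 5$ in place, I would argue that for a fixed $j\in T_1$ the only surviving obstruction to completing $(uu_1,uv)\to(j,T_3)$ into a valid colouring of $G$ is a $(4,j)$-bichromatic cycle that closes through $uu_4$, which is exactly a $(4,j)_{(u_1,u_4)}$-path. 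Hence if such a path were absent for even one $j\in T_1$, the displayed recolouring $(uu_2,uu_1,uv)\overset{\divideontimes}\to(5,j,T_3)$ would finish $G$, contrary to our assumption.

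It follows that $H$ contains a $(4,j)_{(u_1,u_4)}$-path for \emph{every} $j\in T_1$, i.e.\ a $(4,T_1)_{(u_1,u_4)}$-path. The two membership conclusions then drop out of the geometry of these alternating paths. Because $j\in T_1=T_{uv}\setminus C(u_1)$, the path edge incident to $u_1$ cannot carry colour $j$, so it carries colour $4$, giving $4\in C(u_1)$; and because the obstructing cycle closes through the edge $uu_4$ of colour $4$, proper $(4,j)$-alternation forces the path edge incident to $u_4$ to carry colour $j$, giving $j\in C(u_4)$ and hence $T_1\subseteq C(u_4)$. Part~(2) is the mirror image: I would recolour $uu_4\to 5$ (valid by the symmetric hypotheses excluding $(5,i)_{(u_4,u_i)}$-paths for $i\in[2,3]$), observe that the sole remaining obstruction to $(uu_1,uv)\to(j,T_3)$ is now a $(2,j)$-cycle closing through the colour-$2$ edge $uu_2$, i.e.\ a $(2,j)_{(u_1,u_2)}$-path, and read off $2\in C(u_1)$ and $T_1\subseteq C(u_2)$ by the identical alternating-path argument.

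The step I expect to be the main obstacle is the bookkeeping in (a): confirming that, under the stated hypotheses, the two companion recolourings introduce \emph{no} bichromatic cycle other than the single $(4,j)$ (respectively $(2,j)$) cycle whose non-existence is being negated. In particular one must verify that changing $uu_2$ (respectively $uu_4$) to $5$ interacts correctly with the simultaneous change of $uu_1$ to $j$ and the final assignment $uv\to T_3$, so that no $(5,\cdot)$-, $(3,\cdot)$- or $(j,\cdot)$-cycle running through the remaining colours of $C(u)$ can form; here Lemma~\ref{lemma06} and the path-exclusions accumulated in the earlier propositions of Case~(2.4.3) should supply precisely the facts needed to eliminate the spurious colour pairs and isolate the one relevant path.
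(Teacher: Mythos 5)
Your proposal matches the paper's own justification: Proposition~\ref{135notCu1} is stated there as the ``otherwise'' branch of exactly the two displayed recolourings $(uu_2,uu_1,uv)\overset{\divideontimes}\to(5,j,T_3)$ and $(uu_4,uu_1,uv)\overset{\divideontimes}\to(5,j,T_3)$, so taking the contrapositive and reading off $4\in C(u_1)$, the $(4,T_1)_{(u_1,u_4)}$-path, and $T_1\subseteq C(u_4)$ (and the mirror statements) from the forced obstruction is precisely the intended argument. Your derivation of the membership facts from the alternation of the obstructing path (using $j\notin C(u_1)$ at one end and the colour-$4$ edge $uu_4$ at the other) is correct, and the residual colour pairs you flag as needing elimination are indeed handled by Lemma~\ref{lemma06} together with the $(3,j)_{(u_3,v)}$-paths guaranteed by Corollary~\ref{auv2}(1), as you anticipate.
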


Assume that $5\not\in S_u\setminus C(u_3)$.
One sees from \ref{315234} that $3\in C(u_1)$ and $H$ contains a $(3, 5)_{(u_1, u_3)}$-path.
It follows from \ref{135Cu42T34B1B3}(2) that $4\in B_1\cup B_3$, $2/4\in C(u_1)$.
Since $1, 3, 2/4 \in C(u_1)$, $T_1\ne \emptyset$.
Together with \ref{135notCu2134}, \ref{135notCu1}(1--2), and \ref{135Cu42T34B1B3}(1),
\begin{proposition}
\label{135notSu}
If $5\not\in S_u\setminus C(u_3)$, then $T_1\ne \emptyset$, $2, 4\in (B_1\cup B_3)\cap C(u_1)$,
$H$ contains a $(4, T_1)_{(u_1, u_4)}$-path, a $(2, T_1)_{(u_1, u_2)}$-path, $T_1\subseteq C(u_2)\cap C(u_4)$,
and $2\in C(u_3)$, $H$ contains a $(2, T_3)_{(u_3, u_4)}$-path, $T_3\subseteq C(u_2). $
\end{proposition}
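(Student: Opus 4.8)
The plan is to prove \ref{135notSu} not by a fresh argument but as the synthesis of the three conditional propositions \ref{135Cu42T34B1B3}, \ref{135notCu2134}, and \ref{135notCu1}, all specialised to the standing configuration of Case (2.4.3): $d(u)=5$, $A_{uv}=\{1,3\}$, $c(vu_3)=5$, $c(vu_4)=6$, and the case hypothesis $5\notin S_u\setminus C(u_3)$. First I would exploit \ref{315234}, which yields $5/2/3/4\in C(u_1)$ together with a $(5,i)_{(u_1,u_i)}$-path for some $i\in[2,4]$ whenever $5\notin C(u_1)$. The hypothesis $5\notin S_u\setminus C(u_3)$ says that color $5$ occurs on none of $u_1,u_2,u_4$, whereas $5\in C(u_3)\cap C(v)$; hence $u_3$ is the unique neighbor of $u$ carrying $5$, and the forced $(5,i)$-path emanating from $u_1$ can only terminate at $u_3$. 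This pins $i=3$, giving $3\in C(u_1)$ and a $(3,5)_{(u_1,u_3)}$-path, exactly as recorded just above the statement.

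The substantive step is to license the three conditional propositions, each of which is stated under a hypothesis of the form ``$5\notin C(u_j)$ and $H$ has no $(5,\cdot)_{(u_j,\cdot)}$-path''. The no-color-$5$ halves are immediate from $5\notin S_u\setminus C(u_3)$; the delicate halves are the absence of the various bichromatic $(5,\cdot)$-paths issuing from $u_2$ and $u_4$. I would derive these from Lemma~\ref{lemma06} applied to the maximal $(3,5)$-path through $u_1$ and $u_3$: since, among $N(u)$, color $5$ lives only at $u_3$, any competing $(5,\cdot)$-path from $u_2$ or $u_4$ would force a second occurrence of the bichromatic structure already used up by the $u_1$–$u_3$ path, contradicting Lemma~\ref{lemma06}. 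Once these hypotheses are verified, \ref{135Cu42T34B1B3}(2) gives $4\in B_1\cup B_3$ and $2/4\in C(u_1)$. Since $C(u_1)$ then contains $1$, $3$ and at least one of $\{2,4\}$, all lying outside $T_{uv}=[8,\Delta+5]$, we get $|C(u_1)\cap T_{uv}|\le d(u_1)-3\le\Delta-3<\Delta-2=t_{uv}$, whence $T_1=T_{uv}\setminus C(u_1)\ne\emptyset$.

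With $T_1\ne\emptyset$ in hand I would read off the remaining conclusions: \ref{135notCu1}(1) supplies $4\in C(u_1)$ together with the $(4,T_1)_{(u_1,u_4)}$-path and $T_1\subseteq C(u_4)$; \ref{135notCu1}(2) supplies $2\in C(u_1)$ with the $(2,T_1)_{(u_1,u_2)}$-path and $T_1\subseteq C(u_2)$; \ref{135notCu2134} confirms $2\in B_1\cup B_3$; and \ref{135Cu42T34B1B3}(1) provides $2\in C(u_3)$, the $(2,T_3)$-path and $T_3\subseteq C(u_2)$. Collecting these yields $2,4\in(B_1\cup B_3)\cap C(u_1)$ and $T_1\subseteq C(u_2)\cap C(u_4)$, completing the statement. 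I expect the main obstacle to be precisely the second paragraph: marshalling Lemma~\ref{lemma06} to rule out the several $(5,\cdot)$-paths from $u_2$ and $u_4$, so that the three conditional propositions become simultaneously applicable; the final bookkeeping that assembles their outputs into the displayed conclusion is routine by comparison.
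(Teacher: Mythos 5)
Your proposal is correct and follows essentially the same route as the paper: the paper likewise derives $3\in C(u_1)$ and the $(3,5)_{(u_1,u_3)}$-path from \ref{315234} under the hypothesis $5\notin S_u\setminus C(u_3)$, invokes \ref{135Cu42T34B1B3}(2) to get $4\in B_1\cup B_3$ and $2/4\in C(u_1)$, concludes $T_1\ne\emptyset$ from $1,3,2/4\in C(u_1)$, and then assembles the conclusion from \ref{135notCu2134}, \ref{135notCu1}(1)--(2) and \ref{135Cu42T34B1B3}(1). Your second paragraph merely makes explicit the verification (via $5\notin C(u_1)\cup C(u_2)\cup C(u_4)$ and the uniqueness of the maximal $(3,5)$-path through $u_3$) that the paper leaves implicit when applying those conditional propositions.
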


Assume that $6\not\in C(u_1)$ and $H$ contains no $(6, i)_{(u_1, u_i)}$-path for each $i\in \{2, 4\}$.
Let $u u_1\to 6$.
If $1\not\in C(u_3)$ and $H$ contains no $(1, i)_{(u_3, u_i)}$-path for each $i\in \{2, 4\}$,
then $u u_3\to 1$;
if $7\not\in C(u_3)\cup C(u_1)$ and $H$ contains no $(7, i)_{(u_3, u_i)}$-path for each $i\in \{2, 4\}$,
then $u u_3\to 7$,
these reduce the proof to (2.4.2.).
Assume that $6\not\in C(u_3)$ and $H$ contains no $(6, i)_{(u_3, u_i)}$-path for each $i\in \{2, 4\}$.
Let $u u_3\to 6$.
If $3\not\in C(u_1)$ and $H$ contains no $(1, i)_{(u_1, u_i)}$-path for each $i\in \{2, 4\}$,
then $u u_1\to 3$;
if $7\not\in C(u_3)\cup C(u_1)$ and $H$ contains no $(7, i)_{(u_1, u_i)}$-path for each $i\in \{2, 4\}$,
then $u u_1\to 7$;
these reduces the proof to (2.4.1.).
Otherwise, we proceed with the following proposition:
\begin{proposition}
\label{136notCu16notCu3}
\begin{itemize}
\parskip=0pt
\item[{\rm (1)}]
	If $6\not\in C(u_1)$, $H$ contains no $(6, i)_{(u_1, u_i)}$-path for each $i\in \{2, 4\}$, then
    \begin{itemize}
    \parskip=0pt
    \item[{\rm (1.1)}]
	      $1\in C(u_3)$ or $H$ contains a $(1, i)_{(u_3, u_i)}$-path for some $i\in \{2, 4\}$;
    \item[{\rm (1.2)}]
	      $7\in C(u_3)\cup C(u_1)$ or $H$ contains a $(7, i)_{(u_3, u_i)}$-path for some $i\in \{2, 4\}$.
     \end{itemize}
\item[{\rm (2)}]
	If $6\not\in C(u_3)$, $H$ contains no $(6, i)_{(u_3, u_i)}$-path for each $i\in \{2, 4\}$, then
    \begin{itemize}
    \parskip=0pt
    \item[{\rm (2.1)}]
	      $3\in C(u_1)$ or $H$ contains a $(1, i)_{(u_1, u_i)}$-path for some $i\in \{2, 4\}$;
    \item[{\rm (2.2)}]
	      $7\in C(u_3)\cup C(u_1)$ or $H$ contains a $(7, i)_{(u_1, u_i)}$-path for some $i\in \{2, 4\}$.
     \end{itemize}
\end{itemize}
\end{proposition}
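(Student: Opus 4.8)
The plan is to run this proposition through the same \emph{recolour-or-done} template that drives all of Subsection~3.2.2: a statement of this shape is simply the logical negation of a short list of recolouring moves, each of which — whenever its stated hypothesis holds — yields a valid acyclic edge $(\Delta+5)$-colouring of $H$ that matches the pattern of an already-treated subcase, so that we would be finished by that earlier analysis (equivalently, by the induction). Recall that we are inside case (2.4.3), so $d(u)=5$, $A_{uv}=\{1,3\}$, $U_{uv}=\{2,4\}$, $c(uu_i)=i$ for $i\in[1,4]$, $c(vu_3)=5$, $c(vu_4)=6$, and $T_{uv}=[7,\Delta+5]$. The proof therefore consists of performing the moves, certifying that each is acyclic via Lemma~\ref{lemma06}, and reading off the conclusions as the negations of the success conditions.

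For part~(1) I would first recolour $uu_1\to 6$; the hypotheses $6\notin C(u_1)$ and the absence of a $(6,i)_{(u_1,u_i)}$-path for $i\in\{2,4\}$ are exactly what is needed so that the recoloured $6$-edge $uu_1$ cannot close a bichromatic cycle through $uu_2$ or $uu_4$, while the only remaining danger is a $(6,3)$-cycle through $uu_3$, which is why the very next move repaints $uu_3$ (stripping colour $3$ from it). Concretely, if additionally $1\notin C(u_3)$ and $H$ has no $(1,i)_{(u_3,u_i)}$-path for $i\in\{2,4\}$, then $uu_3\to 1$ is safe and the resulting colouring falls into the already-resolved subcase (2.4.2.); negating these hypotheses is precisely (1.1). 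Likewise, replacing that last move by $uu_3\to 7$ — legitimate when $7\notin C(u_3)\cup C(u_1)$ and there is no $(7,i)_{(u_3,u_i)}$-path for $i\in\{2,4\}$ — again reduces to (2.4.2.), and its negation is (1.2).

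Part~(2) is carried out by the essentially symmetric procedure obtained by interchanging the roles of $u_1$ and $u_3$: I would recolour $uu_3\to 6$ (admissible because $6\notin C(u_3)$ and no $(6,i)_{(u_3,u_i)}$-path exists for $i\in\{2,4\}$), then test $uu_1\to 3$ for (2.1) and $uu_1\to 7$ for (2.2) under the displayed path hypotheses. In each case the move produces an acyclic $(\Delta+5)$-colouring matching subcase (2.4.1.), so we would be done, and the two conclusions are exactly the negations of the corresponding success conditions.

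The main obstacle, as throughout this template, is the \emph{safety verification} rather than any counting: one must confirm that after each \emph{pair} of simultaneous recolourings no bichromatic cycle survives. The delicate point is the interplay of colour $3$ on $uu_3$, colour $5=c(vu_3)$, and colour $6=c(vu_4)$, together with the ``cross'' cycles (of type $(6,1)$, $(6,3)$, or $(1,6)$) that run through \emph{both} recoloured edges and are not named in the hypothesis; these are ruled out precisely because the two edges at $u$ are repainted together, so the colour pair that would close such a cycle no longer coexists at $u$, and by the maximality of bichromatic paths (Lemma~\ref{lemma06}) together with the palette $\{1,3,5,6,7\}$ realised at $v$. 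A secondary, purely bookkeeping obstacle is to check that the recoloured $H$ genuinely satisfies the structural hypotheses of (2.4.2.) (resp.\ (2.4.1.)) — i.e.\ that the new common colours on $vu_3,vu_4$ and the new $A_{uv}$ align with those subcases — so that their analysis applies verbatim.
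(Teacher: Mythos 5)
Your proposal is correct and coincides with the paper's own proof: recolour $uu_1\to 6$ and then $uu_3\to 1$ (resp.\ $uu_3\to 7$) to land in the already-treated subcase (2.4.2.), and symmetrically $uu_3\to 6$ followed by $uu_1\to 3$ (resp.\ $uu_1\to 7$) to land in (2.4.1.), the proposition being exactly the negation of the success conditions of these moves; your safety discussion (in particular that repainting both edges at $u$ removes the cross $(6,3)$-cycle, and that the remaining cross cycles die because the needed colour is absent at the relevant endpoint) is the right certification. One minor point: for the move $uu_1\to 3$ the obstruction your analysis actually produces is a $(3,i)_{(u_1,u_i)}$-path for $i\in\{2,4\}$ rather than the $(1,i)$-path printed in (2.1) --- the printed version is a typo, since the paper itself later invokes (2.1) in the form ``$3\in C(u_1)$ or $H$ contains a $(3,2/4)_{(u_1,u)}$-path'', so your argument establishes the statement as it is intended and used.
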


Assume mult$_{S_u}(6) = 1$.
One sees from \ref{136236126inCu1Cu3}(1) that if $6\in C(u_3)$,
then $3\in C(u_1)$ and $H$ contains a $(3, 6)_{(u_1, u_3)}$-path,
or else, $6\in C(u_1)$, and from \ref{136236126inCu1Cu3}(2) that $1\in C(u_3)$ and $H$ contains a $(1, 6)_{(u_1, u_3)}$-path.
If $T_4 = \emptyset$, one sees that $C(u_4) = \{4, 6\}\cup T_{u v}$,
then $(u u_4, v u_4, u v)\overset{\divideontimes}\to (5, 4, 6)$.
Otherwise, $T_4\ne \emptyset$ and it follows from Corollary~\ref{auv2}(4) that for each $j\in T_4$,
$H$ contains a $(j, 1/3/5/7)_{(u_4, v)}$-path.
One sees from \ref{136notCu16notCu3}(1.2) or (2.2) that $7\in S_u$.
Let $u u_2\to 6$.
If $2\not\in B_1\cup B_3$, then $uv\overset{\divideontimes}\to 2$.
If $2\not\in C(u_1)$ and $H$ contains no $(2, 4)_{(u_2, u_4)}$-path,
then $u u_1\to 2$ reduces the proof to (2.4.2.).
If $H$ contains no $(4, j)_{(u_3, u_4)}$-path for some $j\in T_3\cup (\{2\}\setminus C(u_3))$ and there exists a $l\in T_1\setminus (C(u_2)\cap C(u_4))$,
then $(u u_3, u v)\overset{\divideontimes}\to (j, l)$.
Assume that $6\in C(u_3)$. One sees from $3, 2/4, 1\in C(u_1)$ that $T_1\ne \emptyset$.
If $H$ contains no $(4, j)_{(u_1, u_4)}$-path for some $j\in T_1$,
then $u u_1\to j$ reduces the proof to (2.4.2.).
Otherwise, we proceed with the following proposition:
\begin{proposition}
\label{136Cu34T1}
Assume mult$_{S_u}(6) = 1$. Then
\begin{itemize}
\parskip=0pt
\item[{\rm (1)}]
	$7\in S_u$, $2\in B_1\cup B_3$,
    and $2\in C(u_1)$ or $H$ contains a $(2, 4)_{(u_1, u_4)}$-path;
\item[{\rm (2)}]
	if $H$ contains no $(j, 4)_{(u_3, u_4)}$-path for some $j\in T_3\cup (\{2\}\setminus C(u_3))$,
    then $T_1\subseteq C(u_2)\cap C(u_4)$;
\item[{\rm (3)}]
	if $6\in C(u_3)$, then $\emptyset\ne T_1\subseteq C(u_4)$, $4\in C(u_1)$,
    and $H$ contains a $(4, T_1)_{(u_1, u_4)}$-path.
\item[{\rm (4)}]
	$T_4\ne \emptyset$ and for each $j\in T_4$, $H$ contains a $(j, 1/3/5/7)_{(u_4, v)}$-path.
\end{itemize}
\end{proposition}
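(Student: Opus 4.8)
The plan is to establish the four assertions by the standard dichotomy used throughout this subsection: under the standing hypotheses of case (2.4.3) (namely $a_{uv}=2$, $A_{uv}=\{1,3\}$, $d(u)=5$, $d(v)=6$, $c(vu_3)=5$, $c(vu_4)=6$) together with $\mbox{mult}_{S_u}(6)=1$, I argue that each recolouring attempt either completes an acyclic edge $(\Delta+5)$-colouring of $G$ (written $\overset{\divideontimes}\to$), or reduces $H$ to a colouring with $|C(u)\cap C(v)|\le 1$ (finished by Lemma~\ref{auv1coloring}), or reduces to the already-treated subcases (2.4.1)/(2.4.2); since none of these may occur, the stated structural constraints must hold. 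The first step is to locate colour $6$: by \ref{136236126inCu1Cu3}(3) it lies in $C(u_1)\cup C(u_3)$, and $\mbox{mult}_{S_u}(6)=1$ forces it into exactly one of them, so using \ref{136236126inCu1Cu3}(1)--(2) I get that either $6\in C(u_3)$ with $3\in C(u_1)$ and a $(3,6)_{(u_1,u_3)}$-path, or else $6\in C(u_1)$ with $1\in C(u_3)$ and a $(1,6)_{(u_1,u_3)}$-path.

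For (4) I would first dispose of $T_4=\emptyset$: then $C(u_4)=\{4,6\}\cup T_{uv}$ and the explicit completion $(uu_4,vu_4,uv)\overset{\divideontimes}\to(5,4,6)$ colours $G$. With $T_4\ne\emptyset$, since $v_6=u_4$ we have $R_6=T_4$, and Corollary~\ref{auv2}(4) forces, for every $j\in T_4$, a bichromatic $(j,c)_{(u_4,v)}$-path with $c\in C(v)\setminus\{6\}=\{1,3,5,7\}$, which is exactly the claim. For the first half of (1), I would obtain $7\in S_u$ from \ref{136notCu16notCu3}(1.2) or (2.2) applied to whichever of $u_1,u_3$ misses colour $6$; the no-$(6,\cdot)$-path hypothesis of that proposition is supplied by $\mbox{mult}_{S_u}(6)=1$. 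Next I recolour $uu_2\to6$ (admissible because $6\notin C(u_2)$ and, by Lemma~\ref{lemma06} together with the $(\cdot,6)_{(u_1,u_3)}$-path just located, no new bichromatic cycle is created), which frees colour $2$ at $u$: if $2\notin B_1\cup B_3$ then $uv\overset{\divideontimes}\to2$ finishes $G$, so $2\in B_1\cup B_3$; and if in addition $2\notin C(u_1)$ with no $(2,4)_{(u_1,u_4)}$-path, then $uu_1\to2$ reduces to (2.4.2), a contradiction, giving the dichotomy in (1).

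For (2) I assume $H$ has no $(j,4)_{(u_3,u_4)}$-path for some $j\in T_3\cup(\{2\}\setminus C(u_3))$; were there an $l\in T_1\setminus(C(u_2)\cap C(u_4))$, the completion $(uu_3,uv)\overset{\divideontimes}\to(j,l)$ would colour $G$, so no such $l$ exists and $T_1\subseteq C(u_2)\cap C(u_4)$. For (3), if $6\in C(u_3)$ then the first step gives $3\in C(u_1)$; combining with $1\in C(u_1)$ (trivially) and $2/4\in C(u_1)$ from (1), a degree count against $d(u_1)\le\Delta$ using $T_3\subseteq C(u_1)$ forces $T_1\ne\emptyset$. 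Finally, for $j\in T_1$, if $H$ had no $(4,j)_{(u_1,u_4)}$-path the recolouring $uu_1\to j$ would reduce to (2.4.2); hence $4\in C(u_1)$, $H$ contains a $(4,T_1)_{(u_1,u_4)}$-path, and therefore $T_1\subseteq C(u_4)$.

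The step I expect to be the main obstacle is the first half of (1), namely pinning down $7\in S_u$ and, more generally, certifying that every recolouring used (especially $uu_2\to6$ and $uu_1\to2$) genuinely avoids new bichromatic cycles. This is delicate because it depends on correctly translating $\mbox{mult}_{S_u}(6)=1$ into the absence of the relevant $(6,\cdot)$ bichromatic paths required as hypotheses of \ref{136notCu16notCu3}, and on carefully tracking, via Lemma~\ref{lemma06} and the located $(\cdot,6)_{(u_1,u_3)}$-path, whether each reduction lands in (2.4.1), in (2.4.2), or directly completes $G$; the bookkeeping of these branches is where the real work lies.
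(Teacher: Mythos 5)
Your proposal is correct and follows essentially the same route as the paper: the same localization of colour $6$ via \ref{136236126inCu1Cu3}, the same completions $(uu_4,vu_4,uv)\overset{\divideontimes}\to(5,4,6)$ and $uv\overset{\divideontimes}\to 2$ after recolouring $uu_2\to 6$, the same reductions to (2.4.2) via $uu_1\to 2$ and $uu_1\to j$, and the same appeals to Corollary~\ref{auv2}(4) and \ref{136notCu16notCu3}(1.2)/(2.2). The only cosmetic difference is that you spell out why $uu_2\to 6$ is admissible from $\mathrm{mult}_{S_u}(6)=1$, which the paper leaves implicit.
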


If $4\not\in C(v_1)$ and $H$ contains no $(4, i)_{(v_1, v_i)}$-path for each $i\in \{3, 5, 6, 7\}$,
then $v v_1\to 4$ reduces the proof to (2.1.2.).
Otherwise, we proceed with the following proposition:
\begin{proposition}
\label{1343567V1}
\begin{itemize}
\parskip=0pt
\item[{\rm (1)}]
    If $2\in C(v_1)$, then $R_1\ne\emptyset$; and if $4\in C(u_1)$, then $T_1\ne \emptyset$.
\item[{\rm (2)}]
    $4/3/5/6/7\in C(v_1)$, and if $4\not\in C(v_1)$, then $H$ contains a $(4, 3/5/6/7)_{(v_1, v)}$-path.	
\end{itemize}
\end{proposition}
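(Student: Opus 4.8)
The plan is to read both parts off the recoloring reduction that immediately precedes the statement, treating the proposition as the ``otherwise we are done'' branch. Recall that in this configuration $c(vv_1)=c(uu_1)=1$, $c(uu_i)=i$ for $i\in[1,4]$, $C(v)=\{1,3,5,6,7\}$, $A_{uv}=\{1,3\}$, and $T_{uv}=[8,\Delta+5]$ with $t_{uv}=\Delta-2$. I would prove (2) first and then deduce (1) from it together with a degree count.

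For (2), I would consider recoloring $vv_1\to 4$. If $4\in C(v_1)$ then the first clause $4/3/5/6/7\in C(v_1)$ holds trivially and the conditional clause is vacuous, so assume $4\notin C(v_1)$; then the recoloring is proper. In the resulting coloring of $H$ only the edge $vv_1$ has changed color, and $C(v)$ becomes $\{3,4,5,6,7\}$ with $A_{uv}=\{3,4\}$ and $c(vu_3)=5$, $c(vu_4)=6$; this is exactly configuration (2.1.2.), so if the new coloring is acyclic we are done. Hence I may assume it is not, i.e.\ a new bichromatic cycle appears. Any such cycle must use the unique recolored edge $vv_1$, so it is a $(4,j)$-cycle through $vv_1$; the edge of this cycle incident with $v$ other than $vv_1$ is colored $j$, whence $j\in C(v)\setminus\{4\}=\{3,5,6,7\}$. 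Deleting $vv_1$ leaves a $(4,j)_{(v_1,v)}$-path in $H$, which proves the conditional clause. Finally, since $4\notin C(v_1)$, alternation forces the first edge of this path at $v_1$ to be colored $j$, so $j\in C(v_1)$; thus $3/5/6/7\in C(v_1)$, and in all cases $4/3/5/6/7\in C(v_1)$.

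For (1), I would argue by degree counting against the facts already in hand. Suppose $2\in C(v_1)$ but $R_1=\emptyset$, i.e.\ $T_{uv}\subseteq C(v_1)$. Since $c(vv_1)=1$, we then have $C(v_1)\supseteq\{1,2\}\cup T_{uv}$, a set of $2+(\Delta-2)=\Delta$ colors; as $d(v_1)\le\Delta$, this forces $C(v_1)=\{1,2\}\cup T_{uv}$, so none of $3,4,5,6,7$ lies in $C(v_1)$, contradicting part (2). Hence $R_1\ne\emptyset$. Symmetrically, suppose $4\in C(u_1)$ but $T_1=\emptyset$, i.e.\ $T_{uv}\subseteq C(u_1)$; since $c(uu_1)=1$, we get $C(u_1)=\{1,4\}\cup T_{uv}$ of size $\Delta$, so $2,3,6\notin C(u_1)$, contradicting \ref{136236126inCu1Cu3}(1), which asserts $6/2/3\in C(u_1)$. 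Hence $T_1\ne\emptyset$.

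The routine part is the counting; the one step that needs care is the cycle analysis in (2): one must verify that every newly created bichromatic cycle passes through the single recolored edge $vv_1$ and that its second color $j$ is drawn from $\{3,5,6,7\}$ rather than from $\{1\}$ (which is no longer present at $v$ after the recoloring), and that properness—guaranteed by $4\notin C(v_1)$—is precisely what pins the $v_1$-end of the extracted path to color $j$. Invoking Lemma~\ref{lemma06} on the maximality of the bichromatic path makes the direction of alternation, and hence the membership $j\in C(v_1)$, unambiguous. I expect no genuine obstacle beyond this bookkeeping, since both parts are short consequences of the preceding reduction and of the already established \ref{136236126inCu1Cu3}.
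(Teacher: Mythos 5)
Your proof is correct and takes essentially the paper's route: part (2) is exactly the reduction stated just before the proposition (recolor $vv_1\to 4$ to land in case (2.1.2.) unless a $(4,j)_{(v_1,v)}$-path with $j\in\{3,5,6,7\}$ blocks it), and your observation that $4\notin C(v_1)$ forces the $v_1$-end of that path to be colored $j$, hence $j\in C(v_1)$, is the right way to get the first clause. The paper leaves part (1) implicit, and your degree counts ($|\{1,2\}\cup T_{uv}|=|\{1,4\}\cup T_{uv}|=\Delta$) combined with part (2), respectively with \ref{136236126inCu1Cu3}(1), supply exactly the intended justification.
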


Assume that $4\not\in C(u_3)$ (or $T_3\cap T_4\ne \emptyset$) and $3\not\in C(u_2)$.
Let $u u_2\to 3$, $u u_3\to T_3$ (or $u u_3\to T_3\cap T_4$).
If $H$ contains no $(3, i)_{(u_2, u_i)}$-path for each $i\in \{1, 4\}$,
then $u v\to 2$ or $uv\to R_1$.
If $H$ contains no $(3, 4)_{(u_2, u_4)}$-path, $2\not\in C(u_1)$,
and $H$ contains no $(2, 4)_{(u_1, u_4)}$-path,
then $u u_1\to 2$ reduces to an acyclic edge $(\Delta + 5)$-coloring for $H$ such that $C(u)\cap C(v) = \{3\}$.
Assume that $2\not\in C(u_3)$ (or $T_3\cap T_2\ne \emptyset$) and $3\not\in C(u_4)$.
Let $u u_4\to 4$, $u u_3\to T_3$ (or $u u_3\to T_3\cap T_2$).
If $H$ contains no $(3, i)_{(u_4, u_i)}$-path for each $i\in \{1, 2\}$,
then $u v\to 4$ or $u v\to T_1$.
If $H$ contains no $(2, 4)_{(u_2, u_4)}$-path, $4\not\in C(u_1)$, $H$ contains no $(2, 4)_{(u_1, u_2)}$-path,
then $u u_1\to 4$ reduces to an acyclic edge $(\Delta + 5)$-coloring for $H$ such that $C(u)\cap C(v) = \{3\}$.
Otherwise, we proceed with the following proposition:
\begin{proposition}
\label{132or4Cu2}
\begin{itemize}
\parskip=0pt
\item[{\rm (1)}]
	 If $4\not\in C(u_3)$ (or $T_3\cap T_4\ne \emptyset$) and $3\not\in C(u_2)$, then
     \begin{itemize}
     \parskip=0pt
     \item[{\rm (1.1)}]
	      $H$ contains a $(3, i)_{(u_2, u_i)}$-path for some $i\in \{1, 4\}$,
     \item[{\rm (1.2)}]
	      if $H$ contains no $(3, 4)_{(u_2, u_4)}$-path,
          then $2\in C(u_1)$ or $H$ contains a $(2, 4)_{(u_2, u_4)}$-path.
    \end{itemize}
\item[{\rm (2)}]
	 If $2\not\in C(u_3)$ (or $T_3\cap T_2\ne \emptyset$) and $3\not\in C(u_4)$, then
     \begin{itemize}
     \parskip=0pt
     \item[{\rm (2.1)}]
	      $H$ contains a $(3, i)_{(u_4, u_i)}$-path for some $i\in \{1, 2\}$;
     \item[{\rm (2.2)}]
	      if $H$ contains no $(3, 2)_{(u_2, u_4)}$-path,
          then $4\in C(u_1)$ or $H$ contains a $(2, 4)_{(u_1, u_2)}$-path.
    \end{itemize}
\end{itemize}
\end{proposition}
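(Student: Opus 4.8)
The plan is to read each of the four clauses of Proposition~\ref{132or4Cu2} as the contrapositive of an explicit completion move. For each clause I would exhibit a recoloring of a few edges incident to $u$ (and possibly $v$) that either yields a valid acyclic edge $(\Delta+5)$-coloring of $G$ outright, or leaves $H$ with $|C(u)\cap C(v)|=1$, in which case the single-intersection apparatus developed earlier in this section---\ref{auv1}, \ref{auv176+}, and the preceding cases of Lemma~\ref{auv1coloring}---finishes the job. The only configuration in which neither outcome is reachable is precisely the one recorded in the proposition, and in that configuration the obstruction is always a single bichromatic path whose location is pinned down by Lemma~\ref{lemma06}.

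For part (1), under the hypotheses $3\notin C(u_2)$ and $4\notin C(u_3)$ (or $T_3\cap T_4\neq\emptyset$), I would first free color $3$ at $u$ by recoloring $uu_3$ to a color in $T_3$ (respectively in $T_3\cap T_4$, chosen so that the freed color also misses $C(u_4)$), and then set $uu_2\to 3$, which is legal at $u_2$ by hypothesis. To get (1.1) I attempt $uv\to 2$ or $uv\to R_1$: since $2\notin C(v)$ and the chosen color of $R_1$ misses $C(v_1)$, the only candidate bichromatic cycle must pass through $uu_2$ and is a $(3,i)_{(u_2,u_i)}$-path with $i\in\{1,4\}$; if all such paths are absent the coloring is valid, so some such path must survive. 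For (1.2) I add the move $uu_1\to 2$: when $2\notin C(u_1)$ and $H$ has neither a $(3,4)_{(u_2,u_4)}$-path nor a $(2,4)_{(u_1,u_4)}$-path, this combination reduces $H$ to $C(u)\cap C(v)=\{3\}$ and completes, so one of $2\in C(u_1)$ or the $(2,4)_{(u_1,u_4)}$-path must remain. Part (2) is the mirror image: I free color $3$ via $uu_3\to T_3$ (or $T_3\cap T_2$), recolor $uu_4$ to free color $4$, try $uv\to 4$ or $uv\to T_1$ for (2.1), and use the auxiliary move $uu_1\to 4$ for (2.2), interchanging the roles of $(u_2,2)$ and $(u_4,4)$ throughout; each failed move again leaves exactly one unavoidable bichromatic path, yielding the stated $(3,i)_{(u_4,u_i)}$- and $(2,4)_{(u_1,u_2)}$-conditions.

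The main obstacle, as throughout this subsection, is not the logic of the reduction but the verification that each recoloring creates no bichromatic cycle other than the ones the hypotheses already exclude. Here I would rely on the structural facts already accumulated for case (2.4.3)---the placement of colors $5,6,7$ and of the colors of $T_{uv}$ among $C(u_1),\ldots,C(u_4)$ recorded in \ref{136236126inCu1Cu3}, \ref{135Cu42T34B1B3}, \ref{135notCu2134} and \ref{135notCu1}---together with Lemma~\ref{lemma06}, which guarantees that a maximal two-colored path cannot re-enter the modified edges at a second vertex. The alternative hypotheses $T_3\cap T_4\neq\emptyset$ and $T_3\cap T_2\neq\emptyset$ are exactly what make the $uu_3$-recoloring compatible with the subsequent coloring of $uv$, and checking this compatibility branch by branch is the most delicate bookkeeping step; once it is in place, the four conclusions drop out as the residual cases that no recoloring can resolve.
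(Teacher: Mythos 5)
Your proposal is correct and follows essentially the same route as the paper: the paper likewise recolors $uu_2\to 3$ together with $uu_3\to T_3$ (resp.\ $T_3\cap T_4$), reads (1.1) off the failure of $uv\to 2$ or $uv\to R_1$, obtains (1.2) from the auxiliary move $uu_1\to 2$ reducing to $C(u)\cap C(v)=\{3\}$, and mirrors this for part (2) by freeing color $4$ at $u_4$. Your phrasing of (1.2) with a $(2,4)_{(u_1,u_4)}$-path in fact matches the paper's own derivation (the subscript $(u_2,u_4)$ in the displayed proposition appears to be a typo), so no gap remains.
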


If $4\not\in C(u_3)$, $H$ contains no $(4, i)_{(u_3, v_i)}$-path for each $i\in \{1, 7\}$,
and $H$ contains no $(5, j)_{(u_3, u_j)}$-path for each $j\in [1, 2]$,
then $(v u_3, u u_3)\to (4, 5)$ reduces the proof to (2.4.1.).
Otherwise, we proceed with the following proposition:
\begin{proposition}
\label{13Cu3417512}
If $4\not\in C(u_3)$ and $H$ contains no $(4, i)_{(u_3, v_i)}$-path for each $i\in \{1, 7\}$,
then $H$ contains a $(5, j)_{(u_3, u_j)}$-path for some $j\in [1, 2]$.
\end{proposition}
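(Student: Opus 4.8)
The plan is to establish Proposition~\ref{13Cu3417512} by proving its contrapositive through the recoloring $(vu_3,uu_3)\to(4,5)$ already announced in the preceding paragraph. Concretely, I would assume the two hypotheses, namely $4\notin C(u_3)$ and the absence of a $(4,i)_{(u_3,v_i)}$-path for $i\in\{1,7\}$, and then suppose, toward the conclusion, that $H$ also contains no $(5,j)_{(u_3,u_j)}$-path for $j\in[1,2]$. Under these assumptions the goal is to verify that setting $c(vu_3)=4$ and $c(uu_3)=5$ yields a genuine acyclic edge $(\Delta+5)$-coloring of $H$, whose configuration is, after swapping the roles of $u_3,u_4$ and renaming colors, exactly case (2.4.1.); since that case has already been shown to extend to an acyclic edge $(\Delta+5)$-coloring of $G$, we would be done. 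Hence in the branch where we are not done the extra assumption must fail, which is precisely the asserted $(5,j)$-path.

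First I would check that the recoloring is a proper edge coloring. The edge $vu_3$ is repainted $4$, which is legal at $u_3$ because $4\notin C(u_3)$ and at $v$ because $4\notin C(v)$; the edge $uu_3$ is repainted $5$, legal at $u$ since $5\notin C(u)$ and at $u_3$ since the color $5$, previously carried by $vu_3$, has just been freed. After the change $C(u)=\{1,2,4,5\}$ and $C(v)=\{1,3,4,6,7\}$, so $u$ carries neither $3$ nor $6$, $v$ carries no $5$, and $u_3$ carries no $3$; these four missing-color facts are the crux of the cycle analysis.

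Next I would rule out every bichromatic cycle that the two new edges could create, noting that only pairs involving color $5$ (through $uu_3$) or color $4$ (through $vu_3$) matter, since deleting the old colors $3,5$ can only break cycles. For $uu_3=5$ the closing color at $u$ lies in $\{1,2,4\}$: the pairs $(5,1)$ and $(5,2)$ are excluded by the assumed absence of $(5,j)_{(u_3,u_j)}$-paths, while a $(5,4)$-cycle is impossible because its component contains the path $u\xrightarrow{5}u_3\xrightarrow{4}v$ and $v$ has no color-$5$ edge, making $v$ an endpoint. For $vu_3=4$ the closing color at $v$ lies in $\{1,3,6,7\}$: the pairs $(4,1)$ and $(4,7)$ are excluded by hypothesis; a $(4,3)$-cycle cannot occur because $3\notin C(u_3)$ leaves $u_3$ an endpoint; and a $(4,6)$-cycle cannot occur because the $(4,6)$-component containing $vu_3$ runs $u_3\xrightarrow{4}v\xrightarrow{6}u_4\xrightarrow{4}u$ and terminates at $u$, which has no color-$6$ edge. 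Lemma~\ref{lemma06} is the natural tool to phrase each of these endpoint arguments cleanly.

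The main obstacle is not any single computation but pinning down exactly which candidate cycles the two explicitly stated no-path hypotheses are responsible for, and confirming that the remaining candidates, namely $(5,4)$, $(4,3)$, and especially $(4,6)$, are automatically killed by the colors absent at $u$, $v$, and $u_3$ after the swap rather than by any further path condition. Once this bookkeeping is in place, the reduction to case (2.4.1.) is immediate, and the proposition follows by contraposition.
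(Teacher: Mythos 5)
Your proposal is correct and follows exactly the paper's own argument: the paper justifies this proposition by the one-line observation that if additionally no $(5,j)_{(u_3,u_j)}$-path exists for $j\in[1,2]$, then $(vu_3,uu_3)\to(4,5)$ reduces the configuration to case (2.4.1.). Your verification of properness and of the four candidate bichromatic pairs $(5,1),(5,2),(5,4)$ at $u$ and $(4,1),(4,7),(4,3),(4,6)$ at $v$ is the correct filling-in of the details the paper leaves implicit.
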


Assume that recoloring some edges incident to $u$ reduces to an acyclic edge $(\Delta + 5)$-coloring $c'$ for $H$ such that $C(u)\cap C(v) = \{1, 3\}$, where $(u u_a, u u_b)_{c'} = (1, 3)$, and $a\ne 1$, $b\ne 3$.
One sees from Corollary~\ref{auv2d(u)5} (2.1) that $T_{u v}\subseteq C(v_1)\cap C(v_3)$.
If $(T_3\cup T_1)\setminus C(u_2)\ne \emptyset$ and $2\not\in C'(u)$,
one sees from \ref{1343567V1} that $2\not\in C(v_1)$ and from Corollary~\ref{auv2} (3) that $2\in B_3$,
then on the basic coloring of $c'$, $u v\overset{\divideontimes}\to 2$.
If $(T_3\cup T_1)\setminus C(u_4)\ne \emptyset$ and $4\not\in C'(u)$,
one sees from Corollary~\ref{auv2} (3) that $4\in B_1\cup B_3$ and if $4\not\in C(v_i)$ for some $i\in \{1, 3\}$,
then on the basic coloring of $c'$, $u v\overset{\divideontimes}\to 4$.
thus $C(v_i) = \{i, 4\}\cup T_{u v}$, $i = 1, 3$ and by Corollary~\ref{auv2d(u)5} (2.2), we are done.
Hence, we proceed with the following propositions:
\begin{proposition}
\label{1331recoloring}
Assume that recoloring some edges incident to $u$ reduces to an acyclic edge $(\Delta + 5)$-coloring $c'$ for $H$ such that $C(u)\cap C(v) = \{1, 3\}$, where $(u u_a, u u_b)_{c'} = (1, 3)$, and $a\ne 1$, $b\ne 3$.
We are done if $(T_3\cup T_1)\setminus C(u_2)\ne \emptyset$ and $2\not\in C'(u)$,
or $(T_3\cup T_1)\setminus C(u_4)\ne \emptyset$ and $4\not\in C'(u)$.
\end{proposition}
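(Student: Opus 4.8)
The plan is to push the coloring $c'$ one edge further by choosing a color for $uv$, treating the two bullets in parallel. The first thing I would record is the structural payoff of already possessing $c'$: since $c'$ is an acyclic edge $(\Delta+5)$-coloring of $H$ with $C(u)\cap C(v)=\{1,3\}$ realized by $uu_a\to 1$, $uu_b\to 3$ with $a\ne 1$, $b\ne 3$, Corollary~\ref{auv2d(u)5}~(2.1) (taking $i_1=1$, $i_2=3$) immediately gives $T_{uv}\subseteq B_1\cap B_3\subseteq C(v_1)\cap C(v_3)$ together with the bounds $d(v_i)\ge t_{uv}+1$. This pins down $C(v_1)$ and $C(v_3)$ up to very little, and it is the lever for both cases. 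I would also note at the outset that $2,4\in U_{uv}$, so $2\notin C(v)$ and $4\notin C(v)$, and these remain absent from $C(v)$ under $c'$ since the recoloring touched only edges at $u$.

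For the first bullet I would color $uv\to 2$ on top of $c'$. Because $2\notin C'(u)$ (hypothesis) and $2\notin C(v)$, we have $2\in C\setminus(C'(u)\cup C(v))$, so the only way the assignment can fail is through a bichromatic $(1,2)$- or $(3,2)$-cycle closing at $uv$. A $(1,2)$-cycle must enter $v$ along $vv_1$ and hence would force $2\in C(v_1)$; Proposition~\ref{1343567V1} rules this out. The $(3,2)$-cycle is the delicate point: I would use the hypothesis $(T_3\cup T_1)\setminus C(u_2)\ne\emptyset$, which supplies a color $j\in T_2$, feed it into Corollary~\ref{auv2}~(3) applied to $\alpha=2\in U_{uv}$ to obtain $2\in B_3$, and then invoke the uniqueness of maximal bichromatic paths (Lemma~\ref{lemma06}) to show that the forced $(3,2)$-structure cannot simultaneously close a cycle through the specific edges $uu_b$, $uv$, $vv_3$. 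Concluding that no obstructing cycle exists yields $uv\overset{\divideontimes}\to 2$.

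The second bullet is handled by the mirror move $uv\to 4$, which is legal since $4\notin C'(u)$ and $4\notin C(v)$. Corollary~\ref{auv2}~(3) applied to $\alpha=4$ gives $4\in B_1\cup B_3$. If $4\notin C(v_i)$ for one of $i\in\{1,3\}$, the corresponding $(i,4)$-cycle is blocked at $v_i$, so $uv\overset{\divideontimes}\to 4$ finishes. Otherwise $4\in C(v_1)\cap C(v_3)$, and combining this with $T_{uv}\subseteq C(v_1)\cap C(v_3)$ and $d(v_i)\ge t_{uv}+1$ from Step~1 forces $C(v_i)=\{i,4\}\cup T_{uv}$ for $i=1,3$. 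With the two neighbourhoods thus made symmetric, one checks that the swap $(vv_1,vv_3)\to(3,1)$ is a valid acyclic $(\Delta+5)$-coloring of $H$, so Corollary~\ref{auv2d(u)5}~(2.2) returns $T_{uv}\subseteq C(u_1)\cap C(u_3)$, i.e. $T_1=T_3=\emptyset$; this contradicts $(T_3\cup T_1)\setminus C(u_4)\ne\emptyset$, so the sub-case is vacuous and we are done.

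The step I expect to be the main obstacle is the exclusion of the $(3,2)$-cycle in the first bullet. Naively $2\in B_3$ asserts a $(3,2)$-path between the color-$3$ neighbours of $u$ and $v$, which is precisely the ingredient a $(3,2)_{(u,v)}$-cycle would need, so the argument cannot be a crude "no path" claim; the real work is the careful routing argument, via Lemma~\ref{lemma06} and the multiplicity/path constraints of Corollary~\ref{auv2}~(3), showing that once the chosen $j\in(T_3\cup T_1)\setminus C(u_2)$ is taken into account the forced path cannot be completed into a cycle through $uv$. Everything else reduces to the same cycle bookkeeping and degree counting already exercised throughout this subsection.
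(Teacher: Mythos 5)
Your proposal matches the paper's own argument essentially step for step: Corollary~\ref{auv2d(u)5}~(2.1) to get $T_{uv}\subseteq C(v_1)\cap C(v_3)$, then $uv\to 2$ with the $(1,2)$-cycle killed via \ref{1343567V1} and the $(3,2)$-cycle via $2\in B_3$ from Corollary~\ref{auv2}~(3) and Lemma~\ref{lemma06}, and for the second bullet $uv\to 4$ falling back to $C(v_i)=\{i,4\}\cup T_{uv}$, the swap $(vv_1,vv_3)\to(3,1)$, and Corollary~\ref{auv2d(u)5}~(2.2) to force $T_1=T_3=\emptyset$. No meaningful divergence from the paper's proof.
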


Assume that $(u u_1, u u_3)\to (3, 1)$ reduces to an acyclic edge $(\Delta + 5)$-coloring for $H$.
One sees from Corollary~\ref{auv2d(u)5}(2.1) and Corollary~\ref{auv2}(5.2)(4) that
if $(T_3\cup T_1)\setminus C(u_i)\ne \emptyset$ for some $i\in \{2, 4\}$,
then $(u u_1, u u_3, u u_i)\to (3, 1, (T_3\cup T_1)\setminus C(u_i))$ reduces to an acyclic edge $(\Delta + 5)$-coloring $c'$ for $H$ such that $C(u)\cap C(v) = \{1, 3\}$,
where $(u u_3, u u_1)_{c'} = (1, 3)$ and $i\not\in C'(u)$.
Thus, it follows from \ref{1331recoloring} that we may proceed with the following propositions:
\begin{proposition}
\label{1331switch}
If $(u u_1, u u_3)\to (3, 1)$ reduces to an acyclic edge $(\Delta + 5)$-coloring for $H$,
then $T_3\cup T_1\subseteq C(u_2)\cap C(u_4)$,
i.e., if $(T_3\cup T_1)\setminus C(u_i)\ne \emptyset$ for some $i\in \{2, 4\}$,
then $1/3\in S_u$, and $1\in C(u_3)$ or $H$ contains a $(1, 2/4)_{(u_3, u)}$-path,
or $3\in C(u_1)$ or $H$ contains a $(3, 2/4)_{(u_1, u)}$-path.
\end{proposition}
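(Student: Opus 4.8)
The plan is to prove this by a two-stage recoloring that feeds into Proposition~\ref{1331recoloring}. First I would observe that the swap $(uu_1,uu_3)\to(3,1)$ merely exchanges colors $1$ and $3$ on the edges $uu_1,uu_3$, so $C(u)=[1,4]$ is unchanged as a set and in the resulting coloring $c'$ we still have $C(u)\cap C(v)=\{1,3\}$; now, however, color $1$ sits on $uu_3$ and color $3$ on $uu_1$, i.e. $(uu_3,uu_1)_{c'}=(1,3)$ with $a=3\ne1$ and $b=1\ne3$. This is exactly the shape demanded by Corollary~\ref{auv2d(u)5}(2.1), which I would invoke to get $T_{uv}\subseteq C(v_1)\cap C(v_3)$ and $d(v_i)\ge t_{uv}+1$ for $i\in\{1,3\}$; this pins down $v$ and controls every candidate bichromatic cycle passing through $v$.

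Next I would assume the containment fails, say there is a color $j\in(T_3\cup T_1)\setminus C(u_i)$ for some $i\in\{2,4\}$, and carry out the further recoloring $(uu_1,uu_3,uu_i)\to(3,1,j)$. Since $j\in T_{uv}$ and $j\notin C(u_i)$, there is no adjacency conflict at $u_i$, so the work is to certify that no bichromatic cycle is created. Because $j\in T_3\cup T_1$, color $j$ is missing from $C(u_1)$ or from $C(u_3)$, so the only threatening cycles are $(j,\cdot)$-cycles through $u_i$ and cycles reusing the relocated pair $(1,3)$ at $u_1,u_3$; I would exclude these with Corollary~\ref{auv2}(5.2) and (4) together with the $v$-side control already extracted from (2.1). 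This produces an acyclic edge $(\Delta+5)$-coloring $c''$ of $H$ with $C(u)\cap C(v)=\{1,3\}$, $(uu_3,uu_1)_{c''}=(1,3)$, and $i\notin C''(u)$ since $uu_i$ now carries a color outside $[1,4]$.

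With $c''$ in hand — provided this further recoloring is acyclic — the two-stage recoloring manufactures precisely the hypotheses of Proposition~\ref{1331recoloring}: either $(T_3\cup T_1)\setminus C(u_2)\ne\emptyset$ with $2\notin C''(u)$, or $(T_3\cup T_1)\setminus C(u_4)\ne\emptyset$ with $4\notin C''(u)$, so we are done. The two ways to survive are therefore that $(T_3\cup T_1)\setminus C(u_i)=\emptyset$ for both $i\in\{2,4\}$, which is the asserted containment $T_3\cup T_1\subseteq C(u_2)\cap C(u_4)$, or that the recoloring $(uu_1,uu_3,uu_i)\to(3,1,j)$ cannot be made acyclic. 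The ``i.e.'' clause records the structural residue of this second possibility: the obstructing bichromatic cycle forces $1/3\in S_u$ together with either $1\in C(u_3)$ or a $(1,2/4)_{(u_3,u)}$-path, or else $3\in C(u_1)$ or a $(3,2/4)_{(u_1,u)}$-path, exactly as one reads off from Corollary~\ref{auv2}(5.2) and Corollary~\ref{auv2d(u)5}.

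The hard part will be the acyclicity bookkeeping in the middle step, namely confirming that $(uu_1,uu_3,uu_i)\to(3,1,j)$ introduces no bichromatic cycle. Three families must be excluded — cycles through $v$ (handled by (2.1)), cycles using the relocated colors $1$ and $3$ at $u_1,u_3$, and cycles using the fresh color $j$ at $u_i$ — and I would have to match the precise hypothesis of Corollary~\ref{auv2}(5.2)/(4) to each family, using Lemma~\ref{lemma06} to locate the endpoints of the maximal bichromatic paths involved. It is exactly the failure of this verification that yields the $S_u$-membership and the $(1,2/4)$- or $(3,2/4)$-path alternatives of the reformulation, so the obstacle and the stated conclusion are two sides of the same computation.
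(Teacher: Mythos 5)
Your proposal follows the paper's argument exactly: the same two-stage recoloring $(uu_1,uu_3)\to(3,1)$ followed by $uu_i\to j$ for some $j\in(T_3\cup T_1)\setminus C(u_i)$, certified acyclic by citing Corollary~\ref{auv2d(u)5}(2.1) together with Corollary~\ref{auv2}(5.2) and (4), and then closed out by invoking Proposition~\ref{1331recoloring}; the paper gives no more detail on the acyclicity bookkeeping than you do. The one small imprecision is that the alternative conditions in the ``i.e.''\ clause ($1/3\in S_u$, and $1\in C(u_3)$ or a $(1,2/4)_{(u_3,u)}$-path, or $3\in C(u_1)$ or a $(3,2/4)_{(u_1,u)}$-path) are precisely the obstructions to the \emph{initial} swap at $u_1$ and $u_3$ (an adjacency conflict or a bichromatic cycle closing through $uu_2$ or $uu_4$), rather than a residue of the third edge's recoloring failing, but this does not affect the correctness of the argument.
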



Assume that $1\not\in C(u_4)$ and $H$ contains no $(1, i)_{(u_4, u_i)}$-path for each $i\in [2, 3]$.
Let $(u u_4, u v)\to (1, T_3)$.
If $H$ contains no $(2, j)_{(u_1, u_2)}$-path for some $j\in T_1$,
then $u u_1\overset{\divideontimes} \to j$;
if $4\not\in C(u_1)$ and $H$ contains no $(4, i)_{(u_1, u_i)}$-path for each $i\in [2, 3]$,
then $u u_1\overset{\divideontimes} \to 4$;
if $5\not\in C(u_1)$ and $H$ contains no $(5, i)_{(u_1, u_i)}$-path for each $i\in [2, 3]$,
then $u u_1\overset{\divideontimes} \to 5$.
\begin{proposition}
\label{131Cu4232T1}
Assume $1\not\in C(u_4)$, $H$ contains no $(1, i)_{(u_4, u_i)}$-path for each $i\in [2, 3]$. Then
\begin{itemize}
\parskip=0pt
\item[{\rm (1)}]
	$H$ contains a $(2, T_1)_{(u_1, u_2)}$-path, $2\in C(u_1)$, $T_1\subseteq C(u_2)$;
\item[{\rm (2)}]
	For each $j\in \{4, 5\}$, either $j\in C(u_1)$ or $H$ contains a $(j, i)_{(u_1, u_i)}$-path for some $i\in [2, 3]$.
\end{itemize}
\end{proposition}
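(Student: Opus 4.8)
The plan is to read the proposition off as the negation of three failed recoloring attempts, all built on one base move. First I would set up the base move $(uu_4,uv)\to(1,T_3)$: since $1\notin C(u_4)$ and, by hypothesis, $H$ has no $(1,i)_{(u_4,u_i)}$-path for $i\in\{2,3\}$, reassigning color $1$ to the edge $uu_4$ and a color of $T_3$ to $uv$ introduces no bichromatic cycle through $uu_4$ or through $uv$ --- after the move the only colors meeting both $u$ and $u_4$ are $2$, $3$ and the freshly used colors, and Lemma~\ref{lemma06} together with the hypothesis rules out the dangerous $(1,2)$- and $(1,3)$-cycles. The single remaining defect is the color-$1$ clash at $u$, because $uu_1$ still carries $1$; so everything reduces to recoloring $uu_1$ away from $1$.

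Next I would run through the three candidate colors for $uu_1$, exactly as flagged before the statement. If some $j\in T_1$ admits no $(2,j)_{(u_1,u_2)}$-path, then $uu_1\to j$ is safe and $uv\overset{\divideontimes}\to T_3$ finishes; if $4\notin C(u_1)$ and there is no $(4,i)_{(u_1,u_i)}$-path for $i\in\{2,3\}$, then $uu_1\overset{\divideontimes}\to 4$ finishes; likewise for color $5$. Since we are in the ``otherwise'' situation (none of these produces a completion), each enabling condition must fail. The failure of the color-$4$ and color-$5$ attempts is literally statement (2): for $j\in\{4,5\}$ either $j\in C(u_1)$ or $H$ has a $(j,i)_{(u_1,u_i)}$-path for some $i\in\{2,3\}$. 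So (2) needs no extra work.

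For (1) I would then analyze the paths forced by the failure of the first attempt. Recoloring $uu_1\to j$ would create a bichromatic cycle precisely when $H$ already carries a $(2,j)_{(u_1,u_2)}$-path closing through the edges $uu_1$ and $uu_2$; hence the attempt fails for every $j\in T_1$ exactly when such a path exists for every $j\in T_1$, which is the assertion that $H$ contains a $(2,T_1)_{(u_1,u_2)}$-path. For the membership facts I would inspect the endpoints of these paths: at $u_1$ the path must start along color $2$, since $j\in T_1=T_{uv}\setminus C(u_1)$ forces $j\notin C(u_1)$, whence $2\in C(u_1)$; and the cycle it would close passes through $uu_2$ (color $2$), so the path reaches $u_2$ along color $j$, giving $j\in C(u_2)$ for every $j\in T_1$, i.e. $T_1\subseteq C(u_2)$. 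This yields (1).

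The hard part will be the acyclicity bookkeeping for the base move and the three completions, which I have glossed over: one must verify that simultaneously recoloring $uu_1$, $uu_4$ and $uv$ introduces no bichromatic cycle in any color pair $\{1,\cdot\}$, $\{j,\cdot\}$ or $\{t,\cdot\}$ with $t\in T_3$. This is where the standing data of subcase (2.4.3) --- $A_{uv}=\{1,3\}$, $c(vu_3)=5$, $c(vu_4)=6$, the decompositions $C(u_1)=W_1\cup T_3\cup C_{13}$ and $C(u_3)=W_3\cup T_1\cup C_{13}$, and $T_{uv}\subseteq B_1\cup B_3$ from \ref{prop3001} --- must be combined with the hypothesis and Lemma~\ref{lemma06} to confine the genuinely threatening cycles to exactly the $(2,j)$-, $(4,i)$- and $(5,i)$-paths named in the three conditions, so that their absence is both necessary and sufficient for a completion. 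Once that confinement is checked, the proposition is just the contrapositive of the three escapes.
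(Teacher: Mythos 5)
Your proposal is correct and follows essentially the same route as the paper: the paper's proof consists precisely of the base move $(uu_4,uv)\to(1,T_3)$ followed by the three escape attempts for $uu_1$ (some $j\in T_1$ with no $(2,j)_{(u_1,u_2)}$-path, color $4$, color $5$), and the proposition is stated as the negation of all three, exactly as you reconstruct it. Your extraction of $2\in C(u_1)$ and $T_1\subseteq C(u_2)$ from the forced $(2,T_1)_{(u_1,u_2)}$-paths, and your deferral of the routine acyclicity bookkeeping, match the paper's treatment.
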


Assume that $1\not\in C(u_2)$ and $H$ contains no $(1, i)_{(u_2, u_i)}$-path for each $i\in [3, 4]$.
Let $(u u_2, u v)\to (1, T_3)$.
If $H$ contains no $(4, j)_{(u_1, u_4)}$-path for some $j\in T_1$,
then $u u_1\overset{\divideontimes} \to j$;
if $2\not\in C(u_1)$ and $H$ contains no $(2, i)_{(u_1, u_i)}$-path for each $i\in [3, 4]$,
then $u u_1\overset{\divideontimes} \to 2$;
if $5\not\in C(u_1)$ and $H$ contains no $(5, i)_{(u_1, u_i)}$-path for each $i\in [3, 4]$,
then $u u_1\overset{\divideontimes} \to 5$.
Hence, we proceed with the following proposition:
\begin{proposition}
\label{131Cu2344T1}
If $1\not\in C(u_2)$ and $H$ contains no $(1, i)_{(u_2, u_i)}$-path for each $i\in [3, 4]$,
then
\begin{itemize}
\parskip=0pt
\item[{\rm (1)}]
	$H$ contains a $(4, T_1)_{(u_1, u_4)}$-path, $4\in C(u_1)$, $T_1\subseteq C(u_4)$;
\item[{\rm (2)}]
	for each $j\in \{2, 5\}$, either $j\in C(u_1)$ or $H$ contains a $(j, 3/4)_{(u_1, u)}$-path.
\end{itemize}
\end{proposition}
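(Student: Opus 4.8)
The plan is to establish this proposition exactly as the negation of a short sequence of recoloring completions, in the same spirit as its companion Proposition~\ref{131Cu4232T1}, with the roles of $u_2$ and $u_4$ (and of the colors $2$ and $4$) interchanged. Under the two hypotheses $1\notin C(u_2)$ and the absence of a $(1,i)_{(u_2,u_i)}$-path for each $i\in[3,4]$, recoloring $uu_2\to 1$ creates no bichromatic cycle through $u_2$, so I would first perform $(uu_2,uv)\to(1,T_3)$ (here $T_3\neq\emptyset$ under the running assumptions of Case~2.4.3). The only remaining defect is that $uu_1$ and $uu_2$ now share the color $1$, which I repair by recoloring the single edge $uu_1$; the whole statement then records what must hold whenever none of the natural repairs completes the coloring of $G$.

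First I would try $uu_1\to j$ for $j\in T_1$. Since $j\notin C(u_1)$ and, by the hypotheses together with the intermediate choice of a color for $uv$ from $T_3$, the only bichromatic cycle this recoloring can create runs through $uu_4$ as a $(4,j)_{(u_1,u_4)}$-path, it follows that if some $j\in T_1$ admits no such path we finish, i.e. $uu_1\overset{\divideontimes}\to j$. In the remaining case a $(4,j)_{(u_1,u_4)}$-path is present for every $j\in T_1$, which forces $4\in C(u_1)$ (the cycle leaves $u_1$ on a $4$-colored edge) and $T_1\subseteq C(u_4)$ (each such $j$ enters $u_4$ on a $j$-colored edge), giving conclusion~(1). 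Next I would try $uu_1\to 2$ and $uu_1\to 5$: by Lemma~\ref{lemma06} each of these completions succeeds unless the target color already lies in $C(u_1)$ or a $(j,i)_{(u_1,u_i)}$-path with $i\in\{3,4\}$ obstructs it; negating the two success conditions yields precisely conclusion~(2) for $j\in\{2,5\}$.

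The hard part will be pinning down, for each of the three repairs of $uu_1$, exactly which bichromatic cycles can arise and showing that no spurious obstruction survives. The delicate point is that after $uu_2\to1$ the would-be $i=2$ obstruction to the move $uu_1\to 2$, namely a $(1,2)_{(u_1,u_2)}$-path, and the analogous $(1,5)$-chain are both ruled out: this is where Lemma~\ref{lemma06}, applied to the maximal $(1,i)$-path produced by recoloring $uu_2$, does the work, leaving only the $i\in\{3,4\}$ obstructions named in~(2). I would also verify at each step that the color placed on $uv$ can be drawn from $T_3$ without reviving a bichromatic cycle, and that $uu_2\to 1$ genuinely respects properness at $u_2$. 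Once this bichromatic-path bookkeeping is carried out, following the template already validated for Proposition~\ref{131Cu4232T1}, the three failed completions deliver statements~(1) and~(2) verbatim, completing the proof.
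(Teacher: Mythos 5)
Your proposal reproduces the paper's argument: the paper likewise performs $(u u_2, u v)\to (1, T_3)$ and then observes that each of the three repairs $u u_1\to j$ ($j\in T_1$), $u u_1\to 2$, $u u_1\to 5$ yields an acyclic edge $(\Delta+5)$-coloring unless the corresponding obstruction in (1) or (2) holds, exactly mirroring the companion Proposition~\ref{131Cu4232T1} with the roles of $u_2$ and $u_4$ exchanged. The approach and the case analysis are essentially identical to the paper's.
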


If $1\not\in C(u_3)$, $H$ contains no $(1, i)_{(u_3, u_i)}$-path for each $i\in \{2, 4\}$,
and $5\not\in C(u_1)$, $H$ contains no $(5, i)_{(u_1, u_i)}$-path for each $i\in \{2, 4\}$,
then $(u u_1, u u_3, u v)\overset{\divideontimes}\to (5, 1, T_3)$.
Otherwise, together with \ref{136notCu16notCu3}(1.1), 
we proceed with the following proposition:
\begin{proposition}
\label{131notCu3356Cu1}
If $1\not\in C(u_3)$ and $H$ contains no $(1, i)_{(u_3, u_i)}$-path for each $i\in \{2, 4\}$,
then for each $j\in [5, 6]$, $j\in C(u_1)$, or $H$ contains a $(j, i)_{(u_1, u_i)}$-path for some $i\in \{2, 4\}$.
\end{proposition}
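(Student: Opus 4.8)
The plan is to verify the two colours $j=5$ and $j=6$ separately. For each, if the conclusion fails for that $j$ (that is, $j\notin C(u_1)$ and $H$ contains no $(j,i)_{(u_1,u_i)}$-path for either $i\in\{2,4\}$), I would either exhibit an acyclic edge $(\Delta+5)$-colouring of $G$ and hence be done, or deduce directly that the conclusion holds; recording the surviving ``otherwise'' case is all the proposition asserts. Throughout Case (2.4.3) we have $c(vu_3)=5$ and $c(vu_4)=6$, so $u_3=v_5$ and $u_4=v_6$, $C(v)=\{1,3,5,6,7\}$, $T_{uv}=[8,\Delta+5]$, and $T_3\neq\emptyset$ is already in hand from \ref{136236126inCu1Cu3}(3).

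The value $j=6$ needs no fresh work. The standing hypotheses $1\notin C(u_3)$ and ``$H$ contains no $(1,i)_{(u_3,u_i)}$-path for each $i\in\{2,4\}$'' are exactly the negation of the conclusion of \ref{136notCu16notCu3}(1.1). Hence the hypothesis of \ref{136notCu16notCu3}(1) must fail, which is precisely the assertion that $6\in C(u_1)$ or $H$ contains a $(6,i)_{(u_1,u_i)}$-path for some $i\in\{2,4\}$. So I would dispose of $j=6$ in a single line by invoking the contrapositive of \ref{136notCu16notCu3}(1.1).

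For $j=5$ I would carry out the simultaneous recolouring $(uu_1,uu_3,uv)\overset{\divideontimes}\to(5,1,T_3)$. The move $uu_3\to1$ is admissible because $1\notin C(u_3)$ (properness) while the standing hypotheses rule out the $(1,2)_{(u_3,u_2)}$- and $(1,4)_{(u_3,u_4)}$-paths, the only candidate new bichromatic cycles through $uu_3$ once the remaining colours at $u$ are accounted for. The move $uu_1\to5$ is admissible because $5\notin C(u_1)$ and, by the failure assumption, there is no $(5,2)_{(u_1,u_2)}$- or $(5,4)_{(u_1,u_4)}$-path. Finally I would colour $uv$ with some $\alpha\in T_3=T_{uv}\setminus C(u_3)$; since $\alpha\notin C(v)$ and $\alpha\notin C(u_3)$, any $(1,\alpha)_{(u,v)}$-path would have to leave $u$ along the recoloured edge $uu_3$ and then continue from $u_3$ along a nonexistent $\alpha$-edge, so this threat is killed exactly by the choice $\alpha\in T_3$. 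The result is a proper acyclic edge $(\Delta+5)$-colouring of $G$, so in the failure case we are done; hence the conclusion holds for $j=5$.

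The main obstacle is the acyclicity check for the $j=5$ recolouring. Because the operation moves colour $1$ (the old colour of $uu_1$) onto $uu_3$ and colour $5$ onto $uu_1$, and simultaneously inserts $uv$, I must rule out a $(1,5)$-cycle created jointly by the two recoloured edges, as well as a $(5,\alpha)$-cycle through $uv$ — note that after recolouring $C(u)\cap C(v)=\{1,5\}$, that $v$ does carry a $5$-edge (namely $vu_3$), and that $u_1$ does carry an $\alpha$-edge since $\alpha\in T_3\subseteq C(u_1)$, so the colour-$5$ route is genuinely present and must be argued away. Lemma \ref{lemma06} applied to the maximal $(5,\cdot)$- and $(1,\cdot)$-alternating paths emanating from $u_1$ and $u_3$ is the tool I would use to close these last cases; this interaction between the two swapped colours and the new edge is where the care is needed, the rest being routine.
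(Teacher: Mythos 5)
Your proposal is correct and follows essentially the same route as the paper: the paper also obtains the $j=6$ case by combining the hypothesis with \ref{136notCu16notCu3}(1.1), and disposes of the $j=5$ failure case via the very recoloring $(u u_1, u u_3, u v)\overset{\divideontimes}\to (5, 1, T_3)$. Your extra acyclicity checks are sound (and the feared $(1,5)$-cycle is in fact impossible outright, since after $u u_1\to 5$ the vertex $u_1$ carries no edge coloured $1$).
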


Now we introduce several lemmas.
\begin{lemma}
\label{13t3ge2}
 $w_3\ge 4$ and $t_3\ge 2$.
\end{lemma}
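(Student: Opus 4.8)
The plan is to reduce both inequalities to the single claim $w_3\ge 4$ and then settle that claim by a short case analysis on where the color $6$ lies. \emph{Reduction.} Every color belongs to exactly one of $C_{uv}$ and $T_{uv}$, so $C(u_3)$ splits as $W_3$ together with $C(u_3)\cap T_{uv}$. Since $|C(u_3)\cap T_{uv}| = t_{uv}-t_3$ and $t_{uv}=\Delta-2$ in the present situation ($d(v)=6$), I get $d(u_3)=w_3+(\Delta-2)-t_3$, whence $t_3 = w_3+(\Delta-2)-d(u_3)\ge w_3-2$ because $d(u_3)\le\Delta$. Thus, once $w_3\ge 4$ is established, $t_3\ge w_3-2\ge 2$ follows automatically, and it suffices to prove $w_3\ge 4$.

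First I would record the colors already known to lie in $W_3$. Since $c(uu_3)=3$ and $c(vu_3)=5$ we have $\{3,5\}\subseteq C(u_3)$, and \ref{136236126inCu1Cu3}(3) supplies $a_3\in\{2,4\}\cap C(u_3)$ together with $6\in C(u_1)\cup C(u_3)$ and $T_3\ne\emptyset$; hence $w_3\ge 3$. If $6\in C(u_3)$, then $\{3,5,a_3,6\}\subseteq W_3$ gives $w_3\ge 4$, so I may assume $6\notin C(u_3)$, i.e. $6\in C(u_1)$. Now \ref{136236126inCu1Cu3}(2) forces $1\in C(u_3)$ or $2\in C(u_3)$. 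If $1\in C(u_3)$, then $\{3,5,a_3,1\}$ are four distinct members of $W_3$; and if $2\in C(u_3)$ with $a_3=4$, then $\{3,5,4,2\}\subseteq W_3$. In either of these branches $w_3\ge 4$ and we are done.

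The only configuration left is $a_3=2$ with $\{1,4,6,7\}\cap C(u_3)=\emptyset$, i.e. $W_3=\{2,3,5\}$ and $w_3=3$; here I must derive a contradiction. The plan is to exploit $6\notin C(u_3)$ (and $1\notin C(u_3)$) through \ref{136notCu16notCu3}(2) and \ref{136Cu34T1}, which push the $(6,\cdot)$- and $(7,\cdot)$-paths off $u_3$ and thereby force many colors onto $u_1$, $u_2$ and $u_4$; combining these forced memberships with $2,4\in B_1\cup B_3$ and the running bound $\sum_{x\in N(u)\setminus\{v\}}d(x)\le 2\Delta+13$ valid for ($A_8$) (the degree bound recorded just after Eq.~(\ref{eq16})), the count $\sum_{i\in[1,4]}d(u_i)$ is driven past $2\Delta+13$. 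The main obstacle is precisely this last subcase: the bookkeeping needed to show that assuming $W_3=\{2,3,5\}$ inflates $w_1+w_2+w_4$ and the multiplicities $\mathrm{mult}_{S_u}(\cdot)$ enough to violate the degree ceiling, since each color we fail to place on $u_3$ must, by the path propositions, reappear on $u_1,u_2,u_4$ or land in $B_1\cup B_3$. Once that contradiction is secured, $w_3\ge 4$ holds in all cases, and the lemma follows from the reduction above.
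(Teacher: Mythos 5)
Your reduction is sound and matches the paper's: from $d(u_3)=w_3+t_{uv}-t_3$ with $t_{uv}=\Delta-2$ and $d(u_3)\le\Delta$ you correctly get $t_3\ge w_3-2$, so only $w_3\ge 4$ needs proving; and your use of \ref{136236126inCu1Cu3}(2)--(3) to whittle the problem down to the single configuration $W_3=\{2,3,5\}$ is exactly the paper's first step. The problem is that you stop there. For the case $W_3=\{2,3,5\}$ you offer only a plan ("the bookkeeping needed to show\ldots", "once that contradiction is secured\ldots"), and that case is the entire substance of the lemma. Announcing that the degree sum "is driven past $2\Delta+13$" is not a proof: the assertion $2,4\in B_1\cup B_3$ that you lean on is itself something that has to be earned (in the paper, $5\in B_1$ comes from Corollary~\ref{auv2}(4) because $6,7\notin C(u_3)$ kills the $(6/7,\cdot)_{(u_3,v)}$-paths, and $4\in B_1$ only appears in one sub-branch via Corollary~\ref{auv2}(3)), and the counting does not close uniformly.

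Concretely, the paper's argument for $W_3=\{2,3,5\}$ first deduces that $H$ contains a $(2,T_3\cup\{6\})_{(u_2,u_3)}$-path (so $T_3\cup\{6\}\subseteq C(u_2)$), that $3,7\in S_u$ by Corollary~\ref{auv2}(5.1), that $5\in B_1\subseteq C(u_1)$ by Corollary~\ref{auv2}(4), and hence $t_1\ge 2$; it then splits on whether $T_3\cup\{5\}\subseteq C(u_4)$. Only the first branch dies by a one-line count ($\sum d(u_i)\ge 2\Delta+14$). The second branch survives the count at exactly $2\Delta+13$ and has to be finished by pinning down $\Delta=7$, $t_0=0$, the precise sets $W_1,W_2,W_4$, and then exhibiting explicit recolorings (e.g.\ $(uu_2,uu_3,uu_4)\to(5,T_3,3)$ reducing to an earlier case). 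None of this is automatic from a degree inequality, and your proposal neither identifies the branch point nor supplies the recolorings that close the tight branch. As written, the proof of the lemma's only nontrivial case is missing.
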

\begin{proof}
One sees from \ref{136236126inCu1Cu3}(3) that $T_3\ne \emptyset$ and there exists a $a_3\in \{2, 4\}\cap C(u_3)$,
and from \ref{136236126inCu1Cu3}(2) that $6/1/2\in C(u_3)$.
Hence, it suffices to assume that $W_3 = \{3, 5, 2\}$.
Then $H$ contains a $(2, T_3\cup \{6\})_{(u_2, u_3)}$-path, $T_3\cup \{6\}\subseteq C(u_2)$,
from \ref{136236126inCu1Cu3}(3) that $6\in C(u_1)$,
and from Corollary~\ref{auv2}(5.1) that $3, 7\in S_u$.
One sees from Corollary~\ref{auv2}(4) that $5\in B_1$ and $H$ contains a $(2, 5)_{(u_2, u_3)}$-path.
Since $5, 6\in C(u_1)$, $T_1\ne\emptyset$ and there exists a $a_1\in \{2/4\}\cap C(u_1)$.
It follows that $t_1\ge 2$.
If $T_1\setminus T_0\ne \emptyset$,
one sees from Corollary~\ref{auv2}(5.2) and (3) that if there exists a $i\in \{2, 4\}\setminus C(u_1)$,
then mult$_{S_u}(i)\ge 2$,
and from Corollary~\ref{auv2}(5.1) that $1\in S_u$,
then mult$_{S_u}(1) + $ mult$_{S_u\setminus C(u_1)}(2) + $ mult$_{S_u\setminus \{a_1\}}(4)\ge 2$.
One sees that $|\{1, a_1, 5, 6\}| + |\{2, 5, 6\}| + |\{2, 3, 5\}| + |\{4, 6\}| + |\{3, 7\}| + 2t_{uv} = 2\Delta + 10$.

{\bf Case 1.} $(T_3\cup \{5\})\subseteq C(u_4)$.
Since either $t_0\ge t_1\ge 2$,
or mult$_{S_u}(1) + $ mult$_{S_u\setminus C(u_1)}(2) + $ mult$_{S_u\setminus \{a_1\}}(4)\ge 2$,
$\sum_{x\in N(u)\setminus \{v\}}d(x)\ge 2\Delta + 10 + |T_3\cup \{5\}| + 2\ge 2\Delta + 14$.

{\bf Case 2.} $(T_3\cup \{5\})\setminus C(u_4)\ne \emptyset$.
It follows from Corollary~\ref{auv2}(3) and \ref{135Cu42T34B1B3}(2) that $4\in B_1$ and $H$ contains a $(2, 4)_{(u_2, u_3)}$-path.
One sees from  Corollary~\ref{auv2d(u)5}(7.1) that $2\in C(u_1)$ or $1\in S_u$.
Thus, $\sum_{x\in N(u)\setminus \{v\}}d(x)\ge 2\Delta + 10 + |\{4\}| + |\{1/2\}| + t_0 = 2\Delta + 12 + t_0$.
It follows that $T_1\setminus T_0\ne \emptyset$ and then $1\in S_u$.
One sees from Corollary~\ref{auv2}(1.2) that if $T_1\setminus C(u_4)\ne \emptyset$,
then $2\in C(u_1)$,
and from Corollary~\ref{auv2}(5.2) that if $T_1\setminus C(u_2)\ne \emptyset$, then $2\in C(u_1)\cup C(u_4)$.
Hence, $2\in C(u_1)\cup C(u_4)$ and $\sum_{x\in N(u)\setminus \{v\}}d(x)\ge 2\Delta + 10 + |\{4\}| + |\{1, 2\}| + t_0 = 2\Delta + 13 + t_0$.
It follows that $\Delta = 7$, $t_0 = 0$, mult$_{S_u}(i) = 1$, $i = 1, 3, 7$, mult$_{S_u}(2) = 2$.
\begin{itemize}
\parskip=0pt
\item
     $T_1\setminus C(u_4)\ne \emptyset$.
     Since $2\in C(u_1)$, $T_1\subseteq C(u_2)$ and $t_1 + t_3 + |\{2, 4, 5, 6\}|\le d(u_2)\le 7$,
     we have $t_1 = 1$, $t_3 = 2$ and $c_{13} = 2$.
     Thus, $d(u_1)\ge |[1, 2]\cup [4, 6]\cup T_3\cup C_{13}|\ge 5 + 1 + 2 = 8$.
\item
     $T_1\setminus C(u_2)\ne \emptyset$.
     Since $4\in C(v_1)$, it follows from Corollary~\ref{auv2d(u)5}(6) that $R_1\ne \emptyset$.
     One sees from Corollary~\ref{auv2}(5.1) that $1\in C(u_4)$ or $H$ contains a $(1, 2)_{(u_2, u_4)}$-path.
     It follows from \ref{1331switch} that $3\in C(u_1)$ or $H$ contains a $(3, i)_{(u_1, u_i)}$-path for some $i\in \{2, 4\}$.
     One sees from Corollary~\ref{auv2}(5.1) that $3\in C(u_2)$ or $H$ contains a $(3, i)_{(u_2, u_i)}$-path for some $i\in \{1, 4\}$.
     It follows that $3\in C(u_2)$, $2\in C(u_1)$, $1\in C(u_4)$,
     or $3\in C(u_1)$, $1\in C(u_2)$, $2\in C(u_4)$.
     Then $(u u_1, u u_2, u u_3, u u_4, u v)\overset{\divideontimes}\to ([1, 3]\setminus C(u_1), [1, 3]\setminus C(u_2), T_3, [1, 2]\setminus C(u_4), R_1)$.
\end{itemize}
Hence, $w_3\ge 4$ and then $t_3\ge 2$.
\end{proof}

\begin{lemma}
\label{13T1neemptyset}
$w_1\ge 3$, $T_1\ne \emptyset$, there exists a $a_1\in \{2, 4\}\cap C(u_1)$, and $t_1\ge 1$, $t_1 + t_3\ge 3$.
\end{lemma}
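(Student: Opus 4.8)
The plan is to argue exactly as in Lemma~\ref{13t3ge2}: assume that $G$ admits no acyclic edge $(\Delta+5)$-coloring obtainable by recoloring from $H$, and show that the five asserted properties must then all hold, by deriving a contradiction (a valid coloring of $G$, a reduction to $a_{uv}\le 1$, or the overload $\sum_{i\in[1,4]}d(u_i)\ge 2\Delta+14$ forbidden for $(A_8)$) whenever one of them fails.

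First I would reduce the list to its two essential items. Recall from subcase (2.4.3) that $C(u_1)=W_1\cup T_3\cup C_{13}$, so $t_1=\Delta-d(u_1)+w_1-2$; since Lemma~\ref{13t3ge2} already gives $t_3\ge 2$, the clause $t_1+t_3\ge 3$ is automatic once $t_1\ge 1$, and $t_1\ge 1$ is exactly $T_1\ne\emptyset$. Moreover, as soon as $T_1\ne\emptyset$ is known, Corollary~\ref{auv2}(1.2) applied with $i_1=1$ produces, for every $j\in T_1$, a color $a_1\in\{2,4\}\cap C(u_1)$ together with a $(a_1,j)_{(u_1,u_{a_1})}$-path; this settles the clause $\exists a_1\in\{2,4\}\cap C(u_1)$ and, since $a_1\ne 1$, records $\{1,a_1\}\subseteq W_1$. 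Hence the whole lemma collapses to proving (i) $T_1\ne\emptyset$ and (ii) $w_1\ge 3$.

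For (i) I would suppose $T_1=\emptyset$, i.e. $T_{uv}\subseteq C(u_1)$. Then $d(u_1)=w_1+t_{uv}=w_1+\Delta-2\le\Delta$ forces $w_1\le 2$, so by \ref{136236126inCu1Cu3}(1) we have $w_1=2$ and $C(u_1)=\{1,x\}\cup T_{uv}$ with $x\in\{2,3,6\}$. With $C(u_1)$ this narrow, the next move is to free the edge $uu_1$: recoloring $uu_1$ to $6$ or $7$ deletes $1$ from $C(u)\cap C(v)$ and reduces the instance to the already-treated subcase (2.4.2), unless the obstructing bichromatic paths are present; when they all are, \ref{136236126inCu1Cu3}(2), \ref{136notCu16notCu3} and Corollary~\ref{auv2}(5.1) force $5,6,7$ into $C(u_2)\cup C(u_3)\cup C(u_4)$, and combined with $w_3\ge 4$ (Lemma~\ref{13t3ge2}) this pushes $\sum_{i\in[1,4]}d(u_i)\ge 2\Delta+14$, contradicting the $(A_8)$ bound. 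The split on the three values of $x$ is routine once the path bookkeeping is fixed.

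For (ii), by (i) and the reduction above we have $\{1,a_1\}\subseteq W_1$ with $a_1\in\{2,4\}$, so it remains only to exhibit a third color of $[1,7]$ in $C(u_1)$. By \ref{136236126inCu1Cu3}(3), $6\in C(u_1)\cup C(u_3)$: if $6\in C(u_1)$ then $\{1,a_1,6\}\subseteq W_1$ and we are done; otherwise \ref{136236126inCu1Cu3}(1) yields a $(6,i)_{(u_1,u_i)}$-path with $i\in\{2,3\}$, whence $i\in C(u_1)$, and unless $i=2$ with $a_1=2$ and $3,4\notin C(u_1)$ the three colors are already in hand. The residual case $W_1=\{1,2\}$ with a $(6,2)_{(u_1,u_2)}$-path is where I expect the main obstacle: there I would once more recolor $uu_1$ (to $6$ or $7$) to lower $a_{uv}$, and when that is blocked invoke \ref{131notCu3356Cu1} together with Corollary~\ref{auv2}(5.1)--(5.2) to load $5,6,7$ onto $u_2,u_3,u_4$ and again overrun $2\Delta+13$. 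As elsewhere in this subsection, the genuine difficulty is not any single step but keeping the forced-path deductions and the degree accounting mutually consistent, so that every escape route either colors $G$ outright or violates the $(A_8)$ degree bound.
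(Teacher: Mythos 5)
Your reduction of the five clauses to the single inequality $w_1\ge 3$ is correct and matches the paper's logic: since $C(u_1)=W_1\cup T_3\cup C_{13}$ one has $t_1=\Delta-d(u_1)+w_1-2\ge w_1-2$, so $w_1\ge 3$ gives $T_1\ne\emptyset$, Corollary~\ref{auv2}(1.2) then supplies $a_1\in\{2,4\}\cap C(u_1)$, and Lemma~\ref{13t3ge2} gives $t_1+t_3\ge 3$. Your use of \ref{315234} and \ref{136236126inCu1Cu3}(1) to pin the bad configuration down to $W_1=\{1,x\}$ with $x\in\{2,3\}$ (note that $x=6$ is already excluded by \ref{315234}, which you drop) is also the paper's opening move, as is the idea of recoloring $uu_1\to 6$ to fall back on subcase (2.4.2).

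The gap is that the elimination of $w_1=2$ — which is the entire content of the lemma — is not actually carried out, and the one quantitative claim you make in its place is false. Loading $5,6,7$ into $C(u_2)\cup C(u_3)\cup C(u_4)$ on top of $w_1=2$ and $w_3\ge 4$ yields only $\sum_{i}d(u_i)\ge w_1+w_2+w_3+w_4+2t_{uv}+t_0\approx 2\Delta+11+t_0$, nowhere near $2\Delta+14$; indeed the paper's own count in Case~1 ($W_1=\{1,3\}$) stops at $2\Delta+11+t_0$ and in Case~2 ($W_1=\{1,2\}$) at $2\Delta+9+\cdots$, and since the forbidden threshold for ($A_8$) at $\Delta=7$ is $2\Delta+14$, counting alone never closes either case. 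The paper instead squeezes the inequality to force exact tightness (e.g.\ $t_3=2$, $W_3=\{3,5,6,a_3\}$, specific colors absent from specific $C(u_i)$) and then finishes each residual configuration with a bespoke multi-edge recoloring such as $(uu_1,uu_4,uv)\to(4,1,T_3)$ or $(uu_1,uu_2,uu_3,uu_4,uv)\to(4,3,2,1,T_3)$, with further sub-splits on whether $T_3\subseteq C(u_2)\cap C(u_4)$, $3\in C(u_4)$, etc. None of these closing moves, nor the structural facts they rest on (e.g.\ that $W_1=\{1,x\}$ forces $(x,5)$- and $(x,6)$-paths out of $u_1$ and hence $5,6\in C(u_x)$, which drives the whole case analysis), appear in your sketch, so the proof as proposed does not go through.
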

\begin{proof}
Otherwise, assume that $w_1\le 2$.
One sees from \ref{315234} and \ref{136236126inCu1Cu3}(1) that $W_1 = \{1, i\}$,
$i\in [2, 3]$ and $H$ contains a $(i, [5, 6])_{(u_1, u_i)}$-path, $5, 6\in C(u_i)$.

{\bf Case 1.} $W_1 = \{1, 3\}$.
It follows from \ref{136Cu34T1}(1) that $6\in C(u_2)$,
from \ref{135Cu42T34B1B3}(2) that $5\in C(u_4)$ or $H$ contains a $(2, 5)_{(u_2, u_4)}$-path,
and from \ref{135notCu2134} that $5\in C(u_2)$ or $H$ contains a $(4, 5)_{(u_2, u_4)}$-path.
If $T_3\setminus C(u_2)\ne \emptyset$,
then $u u_2\to \alpha\in T_3\cap T_2$, $u v\to 2$ or $(u u_1, u v)\to (2, T_3\setminus \{\alpha\})$.
If $T_3\setminus C(u_4)\ne \emptyset$,
then $u u_4\to \beta\in T_3\cap T_4$, $u v\to 4$ or $(u u_1, u v)\to (4, T_3\setminus \{\beta\})$.
Otherwise, and $T_3\subseteq C(u_2)\cap C(u_4)$.

One sees from \ref{131notCu3356Cu1} that $1\in C(u_3)$ or $H$ contains a $(1, 2/4)_{(u_3, u)}$-path.
It follows that $1\in S_u$ and likewise $7\in S_u$.
One sees from \ref{132or4Cu2}(1.2) that if $4\not\in C(u_3)$,
then $3\in C(u_2)$ or $H$ contains a $(3, 4)_{(u_2, u_4)}$-path,
and from  \ref{132or4Cu2}(2.2) that if $2\not\in C(u_3)$,
then $3\in C(u_4)$ or $H$ contains a $(2, 3)_{(u_2, u_4)}$-path.
It follows that $2, 4\in C(u_3)$ or $3\in C(u_2)\cup C(u_4)$.
One sees from $\sum_{x\in N(u)\setminus \{v\}}d(x)\ge |\{1, 3\}| + |\{2, 6, 4/5\}| + |\{4, 6, 2/5\}| + |\{3, 5, 6, 2/4\}| + |\{1, 7, 3/\{2, 4\} \setminus \{a_3\}\}| + 2t_{u v}+ t_0 = 2\Delta + 11 + t_0\ge 2\Delta + 13$ that
$t_3 = 2$, $W_3 = \{3, 5, 6, a_3\}$, $1, 7\in C(u_{a_3})$ and $3\in C(u_{a_3})$,
or $H$ contains a $(3, \{2, 4\}\setminus \{a_3\})_{u_{a_3}, u}$-path.
Hence, we have the following two scenarios:
\begin{itemize}
\parskip=0pt
\item
    $W_3 = \{3, 5, 2, 6\}$ and $W_3 = \{2, 6, 1, 7, 4\}$, $W_4 = \{4, 6, 3, 5\}$.
    Then $(u u_1, u u_4, u v)\overset{\divideontimes}\to (4, 1, T_3)$.
\item
    $W_3 = \{3, 5, 4, 6\}$ and $W_3 = \{2, 6, 3, 5\}$, $W_4 = \{4, 6, 1, 7, 2\}$.
    Then $(u u_1, u u_2, u v)\overset{\divideontimes}\to (2, 1, T_3)$.
\end{itemize}

{\bf Case 2.} $W_1 = \{1, 2\}$.
It follows from \ref{136Cu34T1}(1) that $6\in C(u_3)$.
One sees from \ref{132or4Cu2}(1.1) that if $4\not\in C(u_3)$,
then $3\in C(u_2)$ or $H$ contains a $(3, 4)_{(u_2, u_4)}$-path,
and from  \ref{132or4Cu2}(2.1) that if $2\not\in C(u_3)$,
then $3\in C(u_4)$ or $H$ contains a $(2, 3)_{(u_2, u_4)}$-path.
It follows that $2, 4\in C(u_3)$ or $3\in S_u$.
One sees that $H$ contains a $(2, 5)_{(u_1, u_2)}$-path.
It follows from \ref{13Cu3417512} that $4\in C(u_3)$ or $H$ contains a $(4, 1/7)_{(u_3, v)}$-path.
One sees that the same argument applies if $7\not\in C(u_2)$ and $u u_1\to 7$.
It follows from \ref{131Cu2344T1} (2) that
we may proceed with the following proposition:
\begin{proposition}
\label{132134734}
For each $i\in \{1, 7\}$, $i\in C(u_2)$ or $H$ contains a $(i, 3/4)_{(u_2, u)}$-path.
\end{proposition}

One sees that $|\{1, 2\}| + |\{2, 5, 6\}| + |\{3, 5, 6, a_3\}| + |\{4, 6\}| + |\{1, 7\}| + 2t_{u v} = 2\Delta + 9$.
We distinguish the following on whether $T_3\subseteq C(u_2)\cap C(u_4)$ or not.
\begin{itemize}
\parskip=0pt
\item
     $T_3\subseteq C(u_2)\cap C(u_4)$.
     Then $\sum_{x\in N(u)\setminus \{v\}}d(x)\ge 2\Delta + 9 + |\{3/\{2, 4\}\setminus\{a_3\}\}| + t_0\ge 2\Delta + 10 + t_3\ge 2\Delta + 10 + 2 = 2\Delta + 12$.
     It follows that $\Delta\le 7$, $t_3\le 3$, and if $t_3 = 3$, then mult$_{S_u}(i) = 1$, $i = 1, 7$,
     and mult$_{S_u}(3) + $ mult$_{S_u}(4) = 1$.
     When $2\in C(u_3)$, one sees from $1/7/4\in C(u_)$ that $t_3 = 3$.
     Since $T_3\cup \{2, 5, 6\}\sqsubseteq C(u_2)$, it follows from \ref{132134734} that $3/4\in C(u_2)$.
     One sees that $\{1, 7\}\setminus C(u_3)\ne \emptyset$ and if $4\in C(u_2)$,
     then $\{1, 7\}\setminus C(u_4)\ne \emptyset$.
     Thus, by \ref{132134734}, we are done.
     \quad In the other case, $2\not\in C(u_3)$.
     Then $4\in C(u_3)$, $3\in S_u$ and it follows from \ref{135Cu42T34B1B3}(1) that $5\in C(u_4)$.
     $\sum_{x\in N(u)\setminus \{v\}}d(x)\ge 2\Delta + 10 + t_0 + |\{5\}|\ge 2\Delta + 11 + t_0\ge 2\Delta + 13$.
     It follows that $t_3 = 2$, $2\not\in S_u$, and follows from \ref{132or4Cu2} (2.1) that $3\in C(u_4)$,
     from \ref{132134734} that $C(u_2) = [1, 2]\cup [5, 9]$.
     Then $(u u_1, u u_2, u u_3, u v)\overset{\divideontimes}\to (5, 3, 1, 2)$\footnote{Or, by \ref{1331switch}, we are done. }.
\item
     $T_3\setminus C(u_2)\ne \emptyset$.
     Then $4\in C(u_3)$ and $H$ contains a $(4, T_3)_{(u_3, u_4)}$-path.
     It follows from Corollary~\ref{auv2} (5.2) and (3) that $2\in B_1\cup B_3$, and $2\in C(u_3)\cap C(u_4)$,
     from Corollary~\ref{132or4Cu2} (2.1) that $3\in C(u_4)$ or $H$ contains a $(3, 2)_{(u_4, u_2)}$-path,
     and from \ref{135Cu42T34B1B3}(1) that $5\in C(u_4)$.
     Then $\sum_{x\in N(u)\setminus \{v\}}d(x)\ge 2\Delta + 9 + |\{3, 5, 2\}| + t_0 = 2\Delta + 12 + t_0\ge 2\Delta + 12$.
     It follows that $\Delta\le 7$.
     \begin{itemize}
     \parskip=0pt
     \item
          $2\in C(u_3)\cap C(u_4)$.
          Then $W_2 = [1, 3]\cup [5, 7]$, $W_3 = [2, 6]$, $W_3 = \{2, 4, 5, 6\}$.
          It follows from \ref{132or4Cu2}(2.1) that $H$ contains a $(2, 3)_{(u_2, u_4)}$-path,
          and the from \ref{1331switch} that $H$ contains a $(1, 2)_{(u_2, u_3)}$-path.
          Thus, by \ref{131Cu4232T1}, we are done.
     \item
          $2\in C(u_2)\setminus C(u_4)$.
          One sees from \ref{132or4Cu2}(2.1) that $3\in C(u_4)$, $W_3 = [2, 6]$, and $W_4 = [3, 6]$.
          It follows from \ref{131Cu4232T1} that $4\in C(u_2)$, and from \ref{132134734} that $C(u_2) = [1, 2]\cup [4, 7]$.
          Then $(u u_1, u u_2, u u_3, u u_4, u v)\overset{\divideontimes}\to (4, 3, 1, 2, T_3)$.
     \item
          $2\in C(u_4)\setminus C(u_2)$.
          When $3\in C(u_4)$, one sees that $W_3 = [3, 6]$, $W_4 = [2, 6]$,
          and it follows from \ref{1331switch} that $3\in C(u_2)$,
          from \ref{132134734} that $C(u_2) = [1, 3]\cup [5, 7]$,
          then $(u u_1, u u_2, u u_3, u u_4, u v)\overset{\divideontimes}\to (5, 4, 2, 1, T_3)$.
          In the other case, $3\not\in C(u_4)$, and it follows from \ref{132or4Cu2}(2.1) that $H$ contains a $(2, 3)_{(u_2, u_4)}$-path, from \ref{1331switch} that $1\in C(u_3)$, or $H$ contains a $(1, 4)_{(u_3, u_4)}$-path.
          Hence, $7\not\in C(u_3)\cup C(u_4)$.
          If $H$ contains no $(2, 7)_{(u_2, u_4)}$-path, then $(u u_4, u u_3, u v)\to (7, T_3\cap T_2, 4)$.
          Otherwise, $H$ contains a $(2, 7)_{(u_2, u_4)}$-path.
          One sees that $u u_1\to 7$ and $(u u_1, u u_3)\to (3, 7)$ reduces to an acyclic edge $(\Delta + 5)$-coloring for $H$ such that $C(u)\cap C(v) = \{3, 7\}$.
          Thus, by \ref{1331switch}, we are done.
     \end{itemize}
\item
     $T_3\setminus C(u_4)\ne \emptyset$.
     It follows that $2\in C(u_3)$, $T_3\subseteq C(u_2)$, and
     from Corollary~\ref{auv2} (5.2) and (3) that $4\in B_3\cap C(u_2)$. 
     Since $\sum_{x\in N(u)\setminus \{v\}}d(x)\ge 2\Delta + 9 + |\{2, 4, 3\}| + t_0\ge 2\Delta + 12$ and $d(u_2)\ge |\{2\}\cup [4, 6]| + t_3\ge 7$,
     it follows that $d(u_2) = \Delta = 7$, $t_3 = 3$, $W_1 = [2, 6]$, $1, 3\not\in C(u_2)$,
     and from \ref{132134734} that $H$ contains a $(1, 4)_{(u_2, u_4)}$-path.
     Hence, by \ref{1331switch}(2), we are done.
\end{itemize}
Hence, $T_1\ne \emptyset$ and it follows from Corollary~\ref{auv2}(1.2) that there exists a $a_1\in \{2, 4\}\cap C(u_1)$.
\end{proof}

\begin{lemma}
\label{132T34T113onlyone}
If $T_1\setminus C(u_2)\ne \emptyset$, $T_3\setminus C(u_4)\ne \emptyset$,
and mult$_{S_u}(1) = $ mult$_{S_u}(3) = 1$, mult$_{S_u}(2) = $ mult$_{S_u}(4) = 2$,
then we can obtain an acyclic edge $(\Delta + 5)$-coloring for $H$ or reduces to an acyclic edge $(\Delta + 5)$-coloring for $H$ such that $a_{uv} \le 1$.
\end{lemma}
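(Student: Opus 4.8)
The plan is to exploit the two ``free'' colors granted by the hypotheses together with the sharp multiplicity data. Fix $\alpha\in T_1\setminus C(u_2)$ and $\beta\in T_3\setminus C(u_4)$. Since $A_{uv}=\{1,3\}$ forces $T_{uv}\subseteq C(u_1)\cup C(u_3)$, we have $\alpha\in C(u_3)\setminus(C(u_1)\cup C(u_2))$ and $\beta\in C(u_1)\setminus(C(u_3)\cup C(u_4))$, so $\{i_3,i_4\}=U_{uv}=\{2,4\}$. First I would locate the home colors: applying Corollary~\ref{auv2}(1.2) to $\alpha\in T_1$ (with $i_1=1$, $i_2=3$) produces a $(4,\alpha)_{(u_1,u_4)}$-path together with $4\in C(u_1)$ and $\alpha\in C(u_4)$ — the index must be $4$ and not $2$ exactly because $\alpha\notin C(u_2)$. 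Symmetrically, applying it to $\beta\in T_3$ yields a $(2,\beta)_{(u_3,u_2)}$-path with $2\in C(u_3)$ and $\beta\in C(u_2)$. Feeding $4\in C(u_1)$ and $2\in C(u_3)$ into $\mathrm{mult}_{S_u}(2)=\mathrm{mult}_{S_u}(4)=2$ and $\mathrm{mult}_{S_u}(1)=\mathrm{mult}_{S_u}(3)=1$ then determines, up to a handful of cases, which of $C(u_1),\dots,C(u_4)$ contain each of $1,2,3,4$.

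The heart of the argument is a single coordinated recoloring: $(uu_1,uu_4,uu_2,uu_3)\to(\alpha,1,3,\beta)$. The point is that recoloring $uu_4\to 1$ before $uu_1\to\alpha$ destroys the obstructing $(4,\alpha)_{(u_1,u_4)}$-path, and recoloring $uu_2\to 3$ before $uu_3\to\beta$ destroys the $(2,\beta)_{(u_3,u_2)}$-path. This keeps $C(u)\cap C(v)=\{1,3\}$ (because $\alpha,\beta\in T_{uv}$ lie outside $C(v)$), but now color $1$ sits on $uu_4$ and color $3$ on $uu_2$, so the resulting coloring $c'$ has $(uu_a,uu_b)_{c'}=(1,3)$ with $a=4\ne 1$, $b=2\ne 3$, and $2,4\notin C'(u)$. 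Since $T_1\setminus C(u_2)\ne\emptyset$ gives $(T_3\cup T_1)\setminus C(u_2)\ne\emptyset$ while $2\notin C'(u)$, Proposition~\ref{1331recoloring} then either colors $G$ via $uv\overset{\divideontimes}{\to}2$ or reduces $a_{uv}$ to at most $1$; in the latter outcome Proposition~\ref{auv1} and Lemma~\ref{auv1coloring} finish the coloring. This is exactly the conclusion claimed.

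The main obstacle is verifying that $c'$ is acyclic, i.e., that none of the six color pairs in $C'(u)=\{1,3,\alpha,\beta\}$ closes a bichromatic cycle through $u$. The five pairs meeting $\{1,3\}$ — in particular $(1,3)_{(u_4,u_2)}$, $(1,\alpha)_{(u_4,u_1)}$, $(1,\beta)_{(u_4,u_3)}$, $(3,\alpha)_{(u_2,u_1)}$, $(3,\beta)_{(u_2,u_3)}$ — are controlled by the multiplicity data: because $\mathrm{mult}_{S_u}(1)=\mathrm{mult}_{S_u}(3)=1$ and $\mathrm{mult}_{S_u}(2)=\mathrm{mult}_{S_u}(4)=2$, only boundedly many neighbors carry each of $1,2,3,4$, so Lemma~\ref{lemma06} and the placement worked out in the first step rule out the corresponding alternating paths. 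The genuinely delicate pair is $(\alpha,\beta)_{(u_1,u_3)}$, which the multiplicity hypotheses on $1,2,3,4$ do not directly govern. I expect to clear it by a secondary choice — replacing $\alpha$ (resp.\ $\beta$) by another member of $T_1\setminus C(u_2)$ (resp.\ $T_3\setminus C(u_4)$) whenever one exists — and, when no replacement is available, to extract from the surviving $(\alpha,\beta)$-path together with the already-located colors a lower bound $\sum_{i\in[1,4]}d(u_i)\ge 2\Delta+14$ contradicting the degree constraint of ($A_8$). Assembling these exclusions across the few placement cases from the multiplicity step is where essentially all of the remaining labor lies.
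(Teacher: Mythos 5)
Your first paragraph is sound and matches the paper's opening move: Corollary~\ref{auv2}(1.2) applied to $\alpha\in T_1\setminus C(u_2)$ and $\beta\in T_3\setminus C(u_4)$ does force $4\in C(u_1)$ with a $(4,T_1)_{(u_1,u_4)}$-path and $2\in C(u_3)$ with a $(2,T_3)_{(u_2,u_3)}$-path, and the multiplicity data then pins down the remaining placements up to a case split. The gap is in what you call the heart of the argument. The recoloring $(uu_1,uu_4,uu_2,uu_3)\to(\alpha,1,3,\beta)$ is not even a proper edge coloring in all of those placement cases: mult$_{S_u}(1)=1$ only says that $1$ lies in exactly one of $C(u_2),C(u_3),C(u_4)$, and $1\in C(u_4)$ is a live possibility (as is $3\in C(u_2)$), in which case $uu_4\to 1$ is illegal. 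This is precisely the paper's Case~3, its longest case, and there the paper does \emph{not} reach \ref{1331recoloring} by your route at all — it uses a $(uu_1,uu_3)\to(3,1)$ swap fed into \ref{1331switch}, a five-edge recoloring $(uu_1,uu_2,uu_3,uu_4,uv)\to(3,1,T_3,2,T_1\cap T_2)$ that colors $uv$ directly, or a reduction to the configuration of case (2.1.2.). Your proposal offers nothing for this branch, so the "single coordinated recoloring" cannot carry the lemma.

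Even on the branches where your recoloring is proper, the acyclicity check is not "controlled by the multiplicity data" in the way you assert. In the sub-case $1\in C(u_2)$, $3\in C(u_4)$ the recoloring is proper, but a $(1,3)_{(u_2,u_4)}$-path is not excluded by any hypothesis (the original acyclicity only forbids a $(1,3)$-path between $u_1$ and $u_3$ of the relevant parity), and such a path would close a bichromatic cycle through $uu_4$ and $uu_2$; the paper kills this by first proving $3\in C(u_1)$ in its Case~1, via \ref{1331switch} plus the fact that $2\notin C(u_1)$, before recoloring. Your $(\alpha,\beta)$ pair is worse than "delicate": since $T_{uv}\subseteq C(u_1)\cup C(u_3)$ forces $\beta\in C(u_1)$ and $\alpha\in C(u_3)$, both endpoints automatically carry the colors needed for an $(\alpha,\beta)_{(u_1,u_3)}$-path, and neither of your proposed escapes (changing representatives, or a $2\Delta+14$ degree count) is carried out or obviously available — note the hypotheses here already pin $\sum_i\mathrm{mult}_{S_u}(i)$ tightly, so there is little slack for a counting contradiction. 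The paper avoids both problems structurally by never placing two $T_{uv}$-colors at $u$ simultaneously: each of its case-specific recolorings introduces at most one color from $T_{uv}$ and otherwise permutes $\{1,2,3,4\}$ using exactly the paths it has just certified. You would need to rebuild your argument along those lines.
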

\begin{proof}
One sees that $2\in C(u_3)$, $4\in C(u_1)$,
and $H$ contains a $(2, T_3)_{(u_2, u_3)}$-path, a $(4, T_1)_{(u_1, u_4)}$-path.
It follows from Corollary~\ref{auv2} (5.2) and (3) that $2, 4\in B_1\cup B_3$,
and from Corollary~\ref{auv2} (5.1) that $3\in C(u_2)$, or $H$ contains a $(3, 1/4)_{(u_2, u)}$-path,
and $1\in C(u_4)$, or $H$ contains a $(1, 2/3)_{(u_4, u)}$-path.
We discuss the following three subcases due to $1\in C(u_i)$, $i\in [2, 4]$, respectively.

{\bf Case 1.} $1\in C(u_2)$.
Then $2\in C(u_4)\setminus C(u_1)$, and $H$ contains a $(1, 2)_{(u_2, u_4)}$-path.
One sees from \ref{1331switch} that $3\in C(u_1)$ or $H$ contains a $(3, 2/4)_{(u_1, u)}$-path.
It follows that $3\in C(u_1)$.
Since $(u u_1, u u_2, u u_3, u u_4)\to (2, 3, T_3\cap T_4, 1)$ is an acyclic edge $(\Delta + 5)$-coloring for $H$ with $4\not\in C(u)$,
we are done by \ref{1331recoloring}.

{\bf Case 2.} $1\in C(u_3)$.
Then $3\in C(u_4)$, $H$ contains a $(1, 3)_{(u_3, u_4)}$-path,
and then $4\in C(u_2)$, $H$ contains a $(4, 3)_{(u_2, u_4)}$-path.
Since $(u u_1, u u_2, u u_3, u u_4)\to (T_1\cap T_2, 3, 4, 1)$ is an acyclic edge $(\Delta + 5)$-coloring for $H$ with $2\not\in C(u)$,
we are done by \ref{1331recoloring}.

{\bf Case 3.} $1\in C(u_4)$. Then $3\in C(u_2)\cup C(u_4)$.
When $3\in C(u_4)$, it follows that $4\in C(u_2)\setminus C(u_3)$ and $H$ contains a $(4, 3)_{(u_2, u_4)}$-path.
one sees that $(u u_1, u u_3)\to (3, 1)$ reduces to an acyclic edge $(\Delta + 5)$-coloring for $H$,
then we are done by \ref{1331switch}.
In the other case, $3\in C(u_2)$.
It follows from \ref{1331switch} that $2\in C(u_1)$ or $4\in C(u_3)$.
If $2\in C(u_1)$, then $(u u_1, u u_2, u u_3, u u_4, u v)\overset{\divideontimes} \to (3, 1, T_3, 2, T_1\cap T_2)$.
Otherwise, $4\in C(u_3)$, and  $(u u_1, u u_2, u u_3, u u_4)\to (T_1, 4, 1, 3)$ reduces to (2.1.2.).
\end{proof}

Assume that there exists a $j\in T_3$ such that $H$ contains no $(j, i)_{(u_3, v)}$-path for each $i\in \{6, 7\}$.
One sees from Corollary~\ref{auv2}(4) that $5\in B_1$,
and $H$ contains a $(5, i)_{(u_3, u_i)}$-path for some $i\in \{2, 4\}\cap C(u_3)$.
If $T_3\setminus C(u_4)\ne \emptyset$ and $H$ contains no $(2, 5)_{(u_2, u_3)}$-path,
then $(u u_4, u u_3, v u_3)\to (T_3\cap T_4, 5, j)$ reduces the proof to (2.4.1.) or to an acyclic edge $(\Delta + 5)$-coloring for $H$ such that $|C(u)\cap C(v)| = 1$.
Assume that $T_3\setminus C(u_2)\ne \emptyset$ and $H$ contains no $(4, 5)_{(u_3, u_4)}$-path.
If $|T_3\cap T_2|\ge 2$, then $(u u_2, u u_3, v u_3)\to ((T_3\cap T_2)\setminus \{j\}, 5, j)$;
if $6, 7\not\in C(u_3)$, then $u u_2\to \alpha\in T_3\cap T_2$, $(u u_3, v u_3)\to (5, T_3\setminus \{\alpha\})$;
these reduce to an acyclic edge $(\Delta + 5)$-coloring for $H$ such that $|C(u)\cap C(v)| = 1$.
Hence, we proceed with the following proposition, or otherwise we are done:
\begin{proposition}
\label{1325254545}
Assume that there exists a $j\in T_3$ such that $H$ contains no $(j, i)_{(u_3, v)}$-path for each $i\in \{6, 7\}$.
Then
\begin{itemize}
\parskip=0pt
\item[{\rm (1)}]
	If $T_3\setminus C(u_4)\ne \emptyset$, then $H$ contains a $(2, 5)_{(u_2, u_3)}$-path.
\item[{\rm (2)}]
	If $T_3\setminus C(u_2)\ne \emptyset$, and $|T_3\cap T_2|\ge 2$, or $6, 7\not\in C(u_3)$,
    then $H$ contains a $(4, 5)_{(u_3, u_4)}$-path.
\end{itemize}
\end{proposition}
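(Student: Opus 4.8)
The plan is to argue by contradiction, in the manner standard to this subsection: for each part I assume the asserted bichromatic path does \emph{not} exist and show that a short simultaneous recolouring of a few edges incident to $u$ and $v$ then yields either an acyclic edge $(\Delta+5)$-coloring of $G$ (via $\overset{\divideontimes}\to$), or an acyclic edge $(\Delta+5)$-coloring of $H$ with $|C(u)\cap C(v)|\le 1$, or a reduction to the already-settled case (2.4.1.); since every branch finishes the argument, the path must in fact exist (this is exactly what ``we proceed with the following proposition, or otherwise we are done'' encodes). The common starting point is the consequence of Corollary~\ref{auv2}(4): because $v_5=u_3$ (so that $j\in R_5$) and, by the standing hypothesis, $H$ contains neither a $(j,6)_{(u_3,v)}$- nor a $(j,7)_{(u_3,v)}$-path, recolouring $vu_3\to j$ creates no bichromatic cycle; hence Corollary~\ref{auv2}(4) applies and gives $5\in B_1$ together with a $(5,i)_{(u_3,u_i)}$-path for some $i\in\{2,4\}\cap C(u_3)$.

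For part (1) I would first note that if there were no $(2,5)_{(u_2,u_3)}$-path, then the $(5,i)$-path above forces $i=4$, so $4\in C(u_3)$ and $H$ contains a $(5,4)_{(u_3,u_4)}$-path. Using $T_3\setminus C(u_4)\ne\emptyset$ I choose $\beta\in T_3\cap T_4$ and perform $(uu_4,uu_3,vu_3)\to(\beta,5,j)$. After this move $C(u)=\{1,2,5,\beta\}$ and $C(v)=\{1,3,j,6,7\}$, so $C(u)\cap C(v)=\{1\}$, i.e.\ $a_{uv}\le 1$, and we are finished by Lemma~\ref{auv1coloring} (alternatively this reduces to case (2.4.1.)). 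The assumptions are consumed precisely in checking acyclicity: $vu_3\to j$ frees colour $5$ at $u_3$ so that $uu_3\to 5$ is legal, and the \emph{assumed absence} of a $(2,5)_{(u_2,u_3)}$-path rules out the only candidate $(2,5)$-cycle through the newly $5$-coloured edge $uu_3$, while $\beta\in T_3\cap T_4$ ensures $uu_4\to\beta$ introduces no $(\beta,\cdot)$-cycle at $u_3$ or $u_4$.

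Part (2) is the mirror image. If there were no $(4,5)_{(u_3,u_4)}$-path, the $(5,i)$-path forces $i=2$, giving $2\in C(u_3)$ and $5\in C(u_2)$. Under $T_3\setminus C(u_2)\ne\emptyset$ I split on the two stated sub-hypotheses: when $|T_3\cap T_2|\ge 2$ I use $(uu_2,uu_3,vu_3)\to((T_3\cap T_2)\setminus\{j\},5,j)$, and when $6,7\notin C(u_3)$ I first set $uu_2\to\alpha$ with $\alpha\in T_3\cap T_2$ and then $(uu_3,vu_3)\to(5,T_3\setminus\{\alpha\})$. In both cases the outcome is again $C(u)\cap C(v)=\{1\}$, so Lemma~\ref{auv1coloring} finishes. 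The two extra conditions are exactly what is needed to supply a colour for $uu_2$ inside $T_3\cap T_2$ that is distinct from the colour placed on $vu_3$, so that after colour $5$ is freed at $u_3$ the edge $uu_3$ may safely receive $5$ while $u_2$ absorbs a colour lying outside $C(u_3)\cup C(u_4)$.

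The hard part is not the degree/colour counting but the bookkeeping of acyclicity after recolouring up to three edges at once: one must verify that no $(5,c)$-, $(\beta,c)$- (resp.\ $(\alpha,c)$-), or $(j,c)$-bichromatic cycle is created for any relevant $c$, invoking Lemma~\ref{lemma06} together with the non-existence of the target path and the hypothesis on $(j,6)/(j,7)$-paths. The two delicate interactions I expect to require the most care are (i) the behaviour of the freed colour $5$ at $u_3$ against the $(5,i)_{(u_3,u_i)}$-path guaranteed above, and (ii) confirming, in the $6,7\notin C(u_3)$ branch, that $T_3\setminus\{\alpha\}$ remains an admissible colour class for $vu_3$. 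Once these are checked, each branch lands in an $a_{uv}\le 1$ coloring already resolved, which is what the proposition asserts.
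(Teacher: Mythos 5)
Your proposal follows the paper's argument essentially verbatim: the same appeal to Corollary~\ref{auv2}(4) to obtain $5\in B_1$ together with a $(5,i)_{(u_3,u_i)}$-path for some $i\in\{2,4\}\cap C(u_3)$, and the same three recolourings $(uu_4,uu_3,vu_3)\to(T_3\cap T_4,5,j)$, $(uu_2,uu_3,vu_3)\to((T_3\cap T_2)\setminus\{j\},5,j)$, and $uu_2\to\alpha$ followed by $(uu_3,vu_3)\to(5,T_3\setminus\{\alpha\})$, each landing either in case (2.4.1.) or in a coloring of $H$ with $|C(u)\cap C(v)|\le 1$. This is exactly the paper's proof.
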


Assume that $H$ contains a $(7, j)_{(v, v_7)}$-path for some $j\in T_3$.
If $7\not\in C(u_1)$ and $H$ contains no $(7, i)_{(u_1, u_i)}$-path for each $i\in [2, 4]$,
then $(u u_1, u v)\overset{\divideontimes}\to (7, j)$.
Let $(u u_2, u v)\to (7, j)$.
Assume that $7\not\in C(u_2)\cup C(u_1)$ and $H$ contains no $(7, i)_{(u_2, u_i)}$-path for each $i\in [3, 4]$.
If $H$ contains no $(4, l)_{(u_1, u_4)}$-path for some $l\in T_1$,
then $u u_1\overset{\divideontimes} \to l$;
if $2\not\in C(u_1)$ and $H$ contains no $(2, i)_{(u_1, u_i)}$-path for each $i\in [3, 4]$,
then $u u_1\overset{\divideontimes} \to 2$;
if $5\not\in C(u_1)$ and $H$ contains no $(5, i)_{(u_1, u_i)}$-path for each $i\in [3, 4]$,
then $u u_1\overset{\divideontimes} \to 5$.
Hence, we proceed with the following proposition:
\begin{proposition}
\label{137Cu2344T1}
Assume that $H$ contains a $(7, j)_{(v, v_7)}$-path for some $j\in T_3$. Then
\begin{itemize}
\parskip=0pt
\item[{\rm (1)}]
	 $7\in C(u_1)$ or $H$ contains a $(7, i)_{(u_1, u_i)}$-path for some $i\in [2, 4]$.
\item[{\rm (2)}]
     If $7\not\in C(u_2)\cup C(u_1)$, and $H$ contains no $(7, i)_{(u_2, u_i)}$-path for each $i\in [3, 4]$, then
     \begin{itemize}
     \parskip=0pt
     \item[{\rm (2.1)}]
          $H$ contains a $(4, T_1)_{(u_1, u_4)}$-path, $4\in C(u_1)$, $T_1\subseteq C(u_4)$;
     \item[{\rm (2.2)}]
           for each $j\in \{2, 5\}$, either $j\in C(u_1)$ or $H$ contains a $(j, 3/4)_{(u_1, u)}$-path.
     \end{itemize}
\end{itemize}
\end{proposition}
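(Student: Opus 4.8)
The plan is to prove both parts by the standard dichotomy of this subsection: either one of the attempted recolorings of the edges at $u$ (followed by putting $uv\to j$) yields a valid acyclic edge $(\Delta+5)$-coloring of $G$, in which case we are done, or the alternating paths that obstruct every attempt are forced to exist, and these are exactly the conclusions (1)--(2.2). Two standing observations keep the bookkeeping manageable. First, since $j\in T_3\subseteq T_{uv}=[8,\Delta+5]$ we have $j\notin C(u_3)$ and $j\notin C(v)$, so a $(3,j)$-cycle through $uv$ is impossible (the maximal $(3,j)$-path leaving $u$ along $uu_3$ already dies at $u_3$). Second, because $j\notin C(v)$ and the only $7$-edge at $v$ is $vv_7$, the maximal $(7,j)$-path out of $v$ is uniquely determined and runs through $v_7$; applying Lemma~\ref{lemma06} to it (with $a=7$, $b=j\notin C(v)$, first edge $vv_7$) controls every $(7,j)$-path emanating from $v$, hence every $(7,j)$-cycle that a recoloring of a $u$-edge to color $7$ could try to close.

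For part (1) I would argue by contradiction: suppose $7\notin C(u_1)$ and $H$ has no $(7,i)_{(u_1,u_i)}$-path for any $i\in[2,4]$. Then recoloring $uu_1\to 7$ is proper and acyclic, because any new cycle through $uu_1$ would be a $(7,i)$-cycle with $i\in C(u)\setminus\{7\}=\{2,3,4\}$, i.e. a $(7,i)_{(u_1,u_i)}$-path, which is excluded. Now setting $uv\to j$ leaves $C(u)\cap C(v)=\{3,7\}$, and the two relevant color pairs $\{3,j\}$ and $\{7,j\}$ are killed by the first and second standing observations respectively. Thus $(uu_1,uv)\overset{\divideontimes}\to(7,j)$ is a valid coloring of $G$, a contradiction, and (1) follows.

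For part (2), under the extra hypotheses I would recolor $uu_2\to 7$ instead and set $uv\to j$. This is proper, and the only cycle the recoloring itself could create is a $(7,i)$-cycle for $i\in\{1,3,4\}$, excluded by the no-$(7,i)_{(u_2,u_i)}$-path hypothesis for $i\in\{3,4\}$ and by $7\notin C(u_1)$ for $i=1$. After $(uu_2,uv)\to(7,j)$ we have $C(u)\cap C(v)=\{1,3,7\}$; the pairs $\{3,j\}$ and $\{7,j\}$ are handled as before, so the sole residual obstruction is a $(1,j)_{(u_1,v_1)}$-path, i.e. a $(1,j)$-cycle through the still-present edge $uu_1=1$. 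If none exists we are done; otherwise I break it by recoloring $uu_1$, which removes color $1$ from $u$, trying the colors $l\in T_1$, then $2$, then $5$. Since $l\notin C(u_1)$ and $7\notin C(u_1)$, recoloring $uu_1\to l$ is obstructed only by a $(4,l)_{(u_1,u_4)}$-path; the absence of such a path for some $l\in T_1$ finishes, so its presence for every $l$ is forced, and reading off the endpoints gives $4\in C(u_1)$ and $T_1\subseteq C(u_4)$, which is (2.1). Likewise $uu_1\to 2$ (respectively $uu_1\to 5$) is obstructed only when $2\in C(u_1)$ or a $(2,i)_{(u_1,u_i)}$-path with $i\in\{3,4\}$ exists (respectively with $5$), and these are precisely the statement (2.2).

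The main obstacle will be the acyclicity bookkeeping in part (2): each recoloring of $uu_1$ changes $C(u)$, so one must re-examine every color pair in the new $C(u)\cap C(v)$ as well as every pair involving the freshly placed color on $uu_1$, and verify that the membership facts $l\notin C(u_1)$, $7\notin C(u_1)$ and $j\notin C(u_3)$ collapse the list of candidate bichromatic cycles down to the single named alternating path. The $\{7,j\}$ pair in particular must be argued carefully through Lemma~\ref{lemma06}; once the reduction to one obstructing path per attempt is established, extracting the endpoint memberships claimed in (2.1) and (2.2) is routine.
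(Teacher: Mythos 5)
Your proposal reproduces the paper's own proof of this proposition essentially verbatim: part (1) is the contrapositive of the attempted recoloring $(uu_1,uv)\overset{\divideontimes}\to(7,j)$, and part (2) is obtained by first setting $(uu_2,uv)\to(7,j)$ under the stated hypotheses and then recording the obstructions to the three follow-up recolorings $uu_1\to l$ for $l\in T_1$, $uu_1\to 2$, and $uu_1\to 5$, exactly as the paper does. The one place you assert slightly more than is immediately justified is that $uu_1\to l$ is obstructed \emph{only} by a $(4,l)_{(u_1,u_4)}$-path: since $l\in T_1\subseteq C(u_3)$ and $l\in B_3$, a $(3,l)_{(u_1,u_3)}$-path is a priori also a candidate bichromatic cycle, but the paper makes the identical silent reduction, so this is a shared elision rather than a new gap in your argument, and your closing caveat about the acyclicity bookkeeping is well placed.
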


Assume $7\not\in C(u_4)$ and $H$ contains no $(7, i)_{(u_3, u_4)}$-path for each $i\in [2, 3]$.
If $4\not\in C(u_1)$ and $H$ contains no $(4, i)_{(u_1, u_i)}$-path for each $i\in [2, 3]$,
then $(u u_1, u u_4)\to (4, 7)$ reduces the proof to (2.1.2.).
If $7\not\in C(u_1)$ and $H$ contains no $(2, j)_{(u_1, u_2)}$-path for some $j\in T_1$,
then $(u u_4, u u_1)\to (7, j)$ reduces the proof to (2.1.2.).
Hence, we proceed with the following proposition:
\begin{proposition}
\label{137Cu42T1}
Assume $7\not\in C(u_4)$, $H$ contains no $(7, i)_{(u_i, u_4)}$-path for each $i\in [2, 3]$. Then
\begin{itemize}
\parskip=0pt
\item[{\rm (1)}]
	 $4\in C(u_1)$ or $H$ contains a $(4, i)_{(u_1, u_i)}$-path for some $i\in [2, 3]$;
\item[{\rm (2)}]
     if $7\not\in C(u_1)$, then $2\in C(u_1)$, $H$ contains a $(2, T_1)_{(u_1, u_2)}$-path and $T_1\subseteq C(u_2)$.
\end{itemize}
\end{proposition}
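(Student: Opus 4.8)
The plan is to prove this the same way every other win-or-structure statement in this subsection is proved: I attempt two explicit recolourings of the edges incident to $u$, and I show that whenever one of them yields a valid acyclic edge $(\Delta+5)$-colouring of $H$ (here, always one that reduces to the already-settled Case 2.1.2), the corresponding structural alternative in the statement must have failed. Consequently, if both conclusions of the proposition were false we would be done, which is the contrapositive we want. The tool for certifying that a recolouring creates no new bichromatic cycle is Lemma~\ref{lemma06}: the absence of a forbidden two-coloured cycle through a recoloured edge is exactly the absence of a maximal two-coloured path with the prescribed endpoints, and the hypotheses of the proposition are phrased precisely in these terms. Recall the standing data of Case 2.4.3: $d(u)=5$, $d(v)=6$, $A_{uv}=\{1,3\}$, $U_{uv}=\{2,4\}$, $c(uu_i)=i$, $c(vu_3)=5$, $c(vu_4)=6$, $C(v)=\{1,3,5,6,7\}$ and $T_{uv}=\{8,\dots,\Delta+5\}$ with $t_{uv}=\Delta-2$.

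For part~(1), I would perform $(uu_1,uu_4)\to(4,7)$. Since $7\notin C(u_4)$ and $H$ contains no $(7,i)_{(u_i,u_4)}$-path for $i\in[2,3]$, the assignment $uu_4\to7$ introduces no $(7,c)$-cycle through $u_4$; and if in addition $4\notin C(u_1)$ and $H$ contains no $(4,i)_{(u_1,u_i)}$-path for $i\in[2,3]$, then $uu_1\to4$ introduces no $(4,c)$-cycle through $u_1$. In the resulting colouring $C(u)\cap C(v)=\{3,7\}$ with $c(uu_3)=3$, $c(uu_4)=7$ both in $A_{uv}$, while $c(vu_3)=5$ and $c(vu_4)=6$ lie outside it; up to relabelling the shared pair $\{3,7\}$ as $\{i_1,i_2\}$ this is exactly the configuration of Case 2.1.2, so we would be done. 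Hence at least one of the two failure conditions is violated, which is precisely conclusion~(1).

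For part~(2), assume $7\notin C(u_1)$ and try $(uu_4,uu_1)\to(7,j)$ for $j\in T_1=T_{uv}\setminus C(u_1)$. Again $uu_4\to7$ is safe by the hypothesis on $u_4$, and $uu_1\to j$ places the colour $j\ge8$ at $u$; the only new $(j,c)$-cycle through the recoloured $uu_1$ that is not already excluded runs through $uu_2$, i.e.\ is governed by a $(2,j)_{(u_1,u_2)}$-path. If for some $j\in T_1$ this path is absent, the recolouring once more reduces to Case 2.1.2 and we are done. Therefore, if we are not done, then for every $j\in T_1$ the graph $H$ contains a $(2,j)_{(u_1,u_2)}$-path; reading off its first edge at $u_1$ (colour $2$) and its last edge at $u_2$ (colour $j$) gives $2\in C(u_1)$, a $(2,T_1)_{(u_1,u_2)}$-path, and $T_1\subseteq C(u_2)$, which is conclusion~(2).

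The main obstacle I anticipate is the joint validity of the simultaneous recolouring in part~(1). Beyond the $(7,c)$- and $(4,c)$-cycles at $u_4$ and $u_1$ taken separately, one must also rule out a $(4,7)$-cycle through $u$ that uses both freshly recoloured edges $uu_1$ (now $4$) and $uu_4$ (now $7$), as well as any cycle involving the colours freed at $u$. Dispatching these requires combining Lemma~\ref{lemma06} with the structural information already gathered in Case 2.4.3 --- in particular that $uv$ lies in the two triangles $uvu_3$, $uvu_4$ and the membership constraints on the $C(u_i)$ supplied by \ref{136236126inCu1Cu3} and Corollary~\ref{auv2d(u)5}. This bookkeeping, though routine in spirit once the right paths are named, is where essentially all the care resides; the final step, \ref{137Cu42T1}(2), is comparatively direct because only the single $(2,j)_{(u_1,u_2)}$-path is in play.
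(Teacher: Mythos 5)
Your proposal is correct and coincides with the paper's own argument: the paper establishes this proposition by exactly the two recolourings $(uu_1,uu_4)\to(4,7)$ and $(uu_4,uu_1)\to(7,j)$ for $j\in T_1$, each reducing to the already-settled Case (2.1.2.), so the negation of either conclusion would finish the induction. The extra worry you raise about a $(4,7)$-cycle through both recoloured edges is in fact vacuous, since after the swap $u_4$ no longer has any edge coloured $4$ (its unique $4$-edge was $uu_4$ itself), so no such bichromatic cycle can close.
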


\begin{lemma}
\label{13T3inCu4}
$T_3\subseteq C(u_4)$.
\end{lemma}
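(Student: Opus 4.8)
The plan is to argue by contradiction, assuming $T_3\setminus C(u_4)\neq\emptyset$ and fixing a color $j\in T_3\setminus C(u_4)$, and then to show that this forces the degree sum $\sum_{i\in[1,4]}d(u_i)$ strictly above the threshold $2\Delta+13$ that configuration ($A_8$) permits (cf.\ the bound displayed just after ($A_8$) and Eq.~(\ref{eq16})), unless one of the recoloring reductions already yields an acyclic edge $(\Delta+5)$-coloring. This follows the same template as the counting arguments in Lemma~\ref{13t3ge2} and Lemma~\ref{13T1neemptyset}, which is why I expect it to close the final missing inclusion in subcase (2.4.3).

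First I would extract the forced structure around $j$. Since $j\in T_3\subseteq T_{uv}\subseteq B_1\cup B_3$, Corollary~\ref{auv2}(1.2) supplies an $a_3\in\{2,4\}\cap C(u_3)$ with a $(a_3,j)_{(u_3,u_{a_3})}$-path and $j\in C(u_{a_3})$; because $j\notin C(u_4)$ we cannot have $a_3=4$, so $a_3=2$, giving $2\in C(u_3)$, $j\in C(u_2)$, and a $(2,j)_{(u_2,u_3)}$-path in $H$. Applying Corollary~\ref{auv2}(5.2) and (3) to all of $T_3\setminus C(u_4)$ then forces $4\in B_1\cup B_3$ and propagates $T_3\setminus C(u_4)\subseteq C(u_2)$, while Corollary~\ref{auv2d(u)5}(7) and Proposition~\ref{1331switch} pin down the membership of colors $1$ and $3$. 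Combined with Lemma~\ref{13t3ge2} ($w_3\ge4$, $t_3\ge2$) and Lemma~\ref{13T1neemptyset} ($w_1\ge3$, $t_1\ge1$, $t_1+t_3\ge3$, and the existence of $a_1\in\{2,4\}\cap C(u_1)$), this determines $W_1$, $W_3$, $W_4$ up to a handful of options.

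Next I would localize the colors $5,6,7$, using that $5=c(vu_3)$ and $6=c(vu_4)$. Propositions~\ref{135notSu}, \ref{136236126inCu1Cu3}, \ref{136Cu34T1} and \ref{136notCu16notCu3} control where $5$ and $6$ can lie among $C(u_1),\dots,C(u_4)$, while Propositions~\ref{137Cu2344T1}, \ref{137Cu42T1} and \ref{1325254545} handle color $7$ and the subcase in which some $j\in T_3$ has no $(j,6/7)_{(u_3,v)}$-path. The argument then branches on whether $5\in S_u\setminus C(u_3)$ and on which of $u_2,u_4$ absorbs $T_3\setminus C(u_4)$; in each branch I would accumulate the guaranteed-distinct entries of $C(u_1)\cup\cdots\cup C(u_4)$ — the small colors together with the doubled or tripled free colors counted through $\mbox{mult}_{S_u}$ via Eq.~(\ref{eq17}) — to reach $\sum_i d(u_i)\ge 2\Delta+14$, a contradiction. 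The only escape is the tight $\Delta=7$ configuration, where the running count lands exactly at $2\Delta+13$; there I would instead invoke the recoloring machinery of Propositions~\ref{1331switch} and \ref{1331recoloring}, or Lemma~\ref{132T34T113onlyone} when both $T_1\setminus C(u_2)\neq\emptyset$ and $T_3\setminus C(u_4)\neq\emptyset$ hold with the minimal multiplicity pattern $\mbox{mult}_{S_u}(1)=\mbox{mult}_{S_u}(3)=1$, $\mbox{mult}_{S_u}(2)=\mbox{mult}_{S_u}(4)=2$, to reduce either to $a_{uv}\le1$ or to one of the earlier subcases (2.4.1), (2.4.2), (2.1.2).

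The hard part will be the boundary bookkeeping at $\Delta=7$. Because the inequality $\sum_i d(u_i)\le 2\Delta+13$ is then tight, a purely numerical contradiction is unavailable and every tight configuration must be dispatched by an explicit recoloring; the delicate point is certifying that each such recoloring — typically swapping the colors on $uu_1$ and $uu_3$ and rerouting a free color through $u_2$ or $u_4$ — introduces no new bichromatic cycle, which is exactly what Propositions~\ref{1331switch}, \ref{1331recoloring} and Lemma~\ref{132T34T113onlyone} are designed to guarantee. Keeping every multiplicity $\mbox{mult}_{S_u}(i)$ for $i\in\{1,2,3,4,5,6,7\}$ exactly tracked, so that the slack between the running lower bound and $2\Delta+13$ is never silently lost, is the principal technical burden here, precisely as it was in Lemma~\ref{13T1neemptyset}.
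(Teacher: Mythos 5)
Your plan matches the paper's own proof of this lemma essentially step for step: the paper likewise argues by contradiction from $T_3\cap T_4\neq\emptyset$, immediately extracts $a_3=2$ (hence $2\in C(u_3)$ with $(2,j)_{(u_2,u_3)}$-paths and $T_3\cap T_4\subseteq C(u_2)$) together with $4\in B_1\cup B_3$ and $\mbox{mult}_{S_u}(4)\ge 2$ via Corollary~\ref{auv2}(3), splits on whether some $j\in T_3\cap T_4$ admits a $(6/7,j)_{(u_3,v)}$-path (Corollary~\ref{auv2}(4) forcing $5\in B_1$ otherwise), and then drives $\sum_{i}d(u_i)$ past the $2\Delta+13$ bound of ($A_8$), invoking Lemma~\ref{132T34T113onlyone} and the recolorings of \ref{1331switch}/\ref{1331recoloring} in the tight $\Delta=7$ configurations exactly as you anticipate. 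The only reservation is that your write-up defers the case-by-case multiplicity bookkeeping, which is where the entire substance of the paper's two-page argument lives.
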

\begin{proof}
Assume that $T_{uv}\setminus (C(u_3)\cup C(u_4))\ne \emptyset$, i.e., $T_3\cap T_4\ne \emptyset$.
It follows that $2\in C(u_3)$, $H$ contains a $(2, T_3)_{(u_3, u_4)}$-path,
and from Corollary~\ref{auv2}(3) that $4\in B_1\cup B_3$ and mult$_{S_u}(4)\ge 2$.

{\bf Case 1.} There exists a $j\in T_3\cap T_4$ such that $H$ contains no $(7, j)_{(u_3, u_4)}$-path.
It follows from Corollary~\ref{auv2}(4) that $5\in B_1$, and $H$ contains a $(5, 2/4)_{(u_3, u)}$-path.
When $C(v_1) = \{1, 5\}\cup T_{uv}$,
it follows from \ref{1343567V1}(2) that $H$ contains a $(4, 5)_{(u_3, v_1)}$-path.
Then $H$ contains a $(2, 5)_{(u_2, u_3)}$-path.
Let $vv_1\to 2$.
If $H$ contains no $(2, j)_{(u_2, v_1)}$-path, then $uv\overset{\divideontimes}\to j$;
otherwise, $(vu_3, uv)\overset{\divideontimes}\to (j, 5)$.
In the other case, $R_1\ne \emptyset$.
It follows from Corollary~\ref{auv2d(u)5}(3) that $1/3\in S_u$.
One sees from \ref{136236126inCu1Cu3} that $2\in C(u_1)$ or $t_1\ge 2$,
and from Corollary~\ref{auv2d(u)5}(7.1) that $2\in C(u_1)$ or $1\in S_u$.
Further, One sees that if $2\not\in C(u_1)$, then $1\in S_u$ and $t_1\ge 2$,
and if $2\in C(u_1)$, then $t_1\ge 2$ if $4\in C(u_1)$, or else $1\in S_u$.
It follows that mult$_{S_u}(1) + $ mult$_{S_u\setminus C(u_3)}(2) + t_1\ge 3$.

{Case 1.1.} $H$ contains neither a $(3, j)_{(u_4, v)}$-path nor a $(7, j)_{(u_4, v_7)}$-path.
It follows from Corollary~\ref{auv2}(4) that $6\in B_1\cup B_3$, $6\in C(u_3)\cup C(u_4)$,
mult$_{S_u}(6)\ge 2$, and $2/6\in C(u_i)$, $i = 3, 4$.
One sees that $|\{1, 5\}| + |\{2\}| + |\{2, 3, 5\}| + |\{4, 6\}| + |\{4, 4, 6, 6, 5\}| + |\{1/3\}| + 2t_{uv} = 2\Delta + 10$.

(1.1.1.) $T_3\subseteq C(u_4)$. Then $t_0\ge t_3\ge 2$.
If $T_1\subseteq T_0$, one sees that $\sum_{x\in N(u)\setminus \{v\}}d(x)\ge 2\Delta + 10 + t_0\ge 2\Delta + 10 + t_3 + t_1\ge 2\Delta + 13$, then $t_1 = 1$ and $2\not\in C(u_1)$, a contradiction.
Otherwise, $T_1\setminus T_0\ne \emptyset$.
It follows from Corollary~\ref{auv2}(5.1) that $1, 7\in S_u$,
and from Corollary~\ref{auv2d(u)5}(9) that $2\in C(u_1)$ or $2\in C(u_4)$.
Thus, $\sum_{x\in N(u)\setminus \{v\}}d(x)\ge 2\Delta + 10 + |\{7, 2\}| + t_0\ge 2\Delta + 12 + 2 = 2\Delta + 14$.

(1.1.2.) $T_3\setminus C(u_4)\ne\emptyset$.
It follows from Corollary~\ref{auv2}(5.1) that $3, 7\in S_u$,
and from \ref{1325254545}(1) that $5\in C(u_2)$.
Then $\sum_{x\in N(u)\setminus \{v\}}d(x)\ge 2\Delta + 10 + |\{7, 1/2\}| + t_0\ge 2\Delta + 12 + t_0$.
One sees that $2\in C(u_1)$ or $6\in C(u_1)$.
When $T_1\subseteq T_0$, then $t_1 = 1$ and thus $1\in S_u$, $2\not\in C(u_1)$, a contradiction.
In the other case, $T_1\setminus T_0 \ne \emptyset$.

If $T_1\subseteq C(u_2)$, then $2\in C(u_1)$ and thus $d(u_1) + d(u_2) + d(u_3)\ge 2t_{uv} + |\{1, 2, 5\}| + |\{2, 5\}| + |\{3, 2, 5\}| + |\{4, 4, 6, 6\}| = 2\Delta + 8$, a contradiction.
Otherwise, $T_1\setminus C(u_2)\ne \emptyset$.
It follows from Corollary~\ref{auv2}(5.1) that $1\in S_u$,
and from Corollary~\ref{auv2d(u)5}(9) that mult$_{S_u}(2)\ge 2$.
Hence, $\sum_{x\in N(u)\setminus \{v\}}d(x)\ge 2\Delta + 10 + |\{1/3, 7, 2\}| + t_0 = 2\Delta + 13 + t_0$,
and mult$_{S_u}(1) = $ mult$_{S_u}(3) = 1$, mult$_{S_u}(2) = $ mult$_{S_u}(4) = 2$.
Thus, by Lemma \ref{132T34T113onlyone}, we are done.

{Case 1.2.} $3\in C(u_3)$ and $H$ contains a $(3, j)_{(u_4, v)}$-path.
One sees that $|\{1, 5\}| + |\{2\}| + |\{2, 3, 5\}| + |\{4, 6, 3\}| + |\{4, 4, 6, 5\}| + |\{1/2\}| + 2t_{uv} = 2\Delta + 10$.

(1.2.1.) $T_1\subseteq T_0$.
If $T_3\subseteq T_0$, $\sum_{x\in N(u)\setminus \{v\}}d(x)\ge 2\Delta + 10 + t_0\ge 2\Delta + t_1 + t_3\ge 2\Delta + 1 + 2 = 2\Delta + 13$,
one sees that $t_1 = 1$ and thus $2\not\in C(u_1)$, a contradiction.
Otherwise, $T_3\setminus T_0\ne \emptyset$, i.e., $T_3\setminus C(u_4)\ne \emptyset$.
It follows from \ref{1325254545}(1) that $5\in C(u_2)$,
and from Corollary~\ref{auv2}(5.1) that $7\in S_u$.
Since $\sum_{x\in N(u)\setminus \{v\}}d(x)\ge 2\Delta + 9 + |\{7\}| + $mult$_{S_u}(1) + $ mult$_{S_u\setminus C(u_3)}(2) + t_0\ge 2\Delta + 10 + $mult$_{S_u}(1) + $ mult$_{S_u\setminus C(u_3)}(2) + t_1\ge 2\Delta + 13$,
it follows that mult$_{S_u}(3) = $ mult$_{S_u}(6) = 1$,
and mult$_{S_u}(1) + $ mult$_{S_u\setminus C(u_3)}(2) + t_1 = 3$.
Hence, $H$ contains a $(4, 3)_{(u_2, u_4)}$-path.
Then it follows from \ref{136notCu16notCu3}(2.1) that $6\in C(u_3)$,
and then $uu_1\overset{\divideontimes}\to 6$\footnote{Or, by \ref{136236126inCu1Cu3}(1), we are done. }.

(1.2.2.) $T_1\setminus T_0\ne \emptyset$.
It follows from Corollary~\ref{auv2}(5.1) that $1, 7\in S_u$,
and from Corollary~\ref{auv2d(u)5}(9) that $2\in S_u\setminus C(u_3)$.
Then $\sum_{x\in N(u)\setminus \{v\}}d(x)\ge 2\Delta + 10 + |\{7, 1/2\}| +t_0 = 2\Delta + 12 + t_0$.
One sees clearly that $T_3\setminus T_0 \ne \emptyset$,
and it follows from \ref{136Cu34T1} that mult$_{S_u}(6)\ge 2$.
Hence, $\sum_{x\in N(u)\setminus \{v\}}d(x)\ge 2\Delta + 12 + |\{6\}| + t_0\ge 2\Delta + 13$.
It follows that mult$_{S_u}(1) = $ mult$_{S_u}(3) = 1$, mult$_{S_u}(2) = $ mult$_{S_u}(4) = 2$.
Thus, by Lemma \ref{132T34T113onlyone}, we are done.

{Case 1.3.} $7\in C(u_3)$ and $H$ contains a $(7, j)_{(u_4, v)}$-path.
One sees that $|\{1, 5\}| + |\{2\}| + |\{2, 3, 5\}| + |\{4, 6, 7\}| + |\{4, 4, 6, 5\}| + |\{1/3\}| + 2t_{uv} = 2\Delta + 10$.

(1.3.1.) $T_3\subseteq T_0$.
If $T_1\subseteq T_0$, $\sum_{x\in N(u)\setminus \{v\}}d(x)\ge 2\Delta + 10 + t_0\ge 2\Delta + t_1 + t_3\ge 2\Delta + 1 + 2 = 2\Delta + 13$,
one sees that $t_1 = 1$ and thus $2\not\in C(u_1)$, a contradiction.
Otherwise, $T_1\setminus T_0\ne \emptyset$.
It follows from Corollary~\ref{auv2}(5.1) that $1\in S_u$,
and Corollary~\ref{auv2d(u)5}(9) that $2\in S_u\setminus C(u_3)$.
Then $\sum_{x\in N(u)\setminus \{v\}}d(x)\ge 2\Delta + 10 + |\{2\}| + t_0\ge 2\Delta + 13 + t_0\ge 2\Delta + 13 + t_3$.
It follows that $t_3 = 2$, $3\not\in S_u$, and mult$_{S_u}(6) = 1$.
Thus, $W_3 = [2, 5]$ and then $uu_3\overset{\divideontimes}\to 6$\footnote{Or, by \ref{136236126inCu1Cu3}(2), we are done. }.

(1.3.2.) $T_3\setminus T_0\ne \emptyset$, i.e., $T_3\setminus C(u_4)\ne \emptyset$.
It follows from \ref{1325254545}(1) that $5\in C(u_2)$,
and from Corollary~\ref{auv2}(5.1) that $3\in S_u$.
When $7\not\in S_u\setminus C(u_4)$,
it follows from \ref{137Cu2344T1}(1) that $4\in C(u_1)$,
$H$ contains a $(4, 7)_{(u_1, u_4)}$-path, and then $t_1\ge 2$.
If there exists a $\alpha\in T_1\setminus (C(u_3)\cap C(u_4))$,
then $(u u_2, u u_3, u v)\overset{\divideontimes}\to (7, T_3\cap T_3, \alpha)$.
Otherwise, $T_1\subseteq C(u_2)\cap C(u_4)$.
One sees that $\sum_{x\in N(u)\setminus \{v\}}d(x)\ge 2\Delta + 10 + |\{1/2\}| + t_0\ge 2\Delta + 11 + t_1$.
It follows that $t_1 = 2$, mult$_{S_u}(6) = 1$,
and from \ref{136Cu34T1}(1) that mult$_{S_u}(2) \ge 2$.
Hence, $1\not\in S_u$ and then $W_1 = \{1, 2, 4, 5\}$.
Thus, $u u_1\overset{\divideontimes}\to 6$\footnote{Or, by \ref{136236126inCu1Cu3}(1), we are done. }.
In the other case, $7\in S_u\setminus C(u_4)$,
and $\sum_{x\in N(u)\setminus \{v\}}d(x)\ge 2\Delta + 10 + |\{7\}| + $mult$_{S_u}(1) + $ mult$_{S_u\setminus C(u_3)}(2) + t_0 = 2\Delta + 11 + $mult$_{S_u}(1) + $ mult$_{S_u\setminus C(u_3)}(2) + t_0$.
Recall that mult$_{S_u}(1) + $ mult$_{S_u\setminus C(u_3)}(2) + t_1\ge 3$.
It follows that $T_1\setminus T_0\ne \emptyset$ and then $1\in S_u$ and $2\in C(u_1)\cup C(u_4)$.
Hence, $\sum_{x\in N(u)\setminus \{v\}}d(x)\ge 2\Delta + 10 + |\{7, 1, 2\}| = 2\Delta + 13$.
One sees clearly that mult$_{S_u}(1) = $ mult$_{S_u}(3) = 1$, mult$_{S_u}(2) = $ mult$_{S_u}(4) = 2$.
If $T_1\subseteq C(u_2)$, then $2\in C(u_1)$ and thus $d(u_1) + d(u_2) + d(u_3)\ge 2t_{uv} + |\{1, 2, 5\}| + |\{2, 5\}| + |\{3, 2, 5\}| + |\{4, 4, 6, 7\}| = 2\Delta + 8$, a contradiction.
Otherwise, $T_1\setminus C(u_2)\ne \emptyset$.
Thus, by Lemma \ref{132T34T113onlyone}, we are done.

{\bf Case 2.} $H$ contains a $(7, T_3\cap T_4)_{(u_3, u_4)}$-path.
When for some $l\in \{5, 7\}$, $l\not\in S_u\setminus C(u_3)$,
it follows from \ref{315234} or \ref{137Cu2344T1}(1) that $3\in C(u_1)$,
and from \ref{135notSu}, or \ref{137Cu2344T1}(2.1) and \ref{137Cu42T1}(2) that $2, 4\in C(u_1)$ and $T_1\subseteq C(u_2)\cap C(u_4)$.
Recall that $6\in C(u_1)\cup C(u_3)$.
Then $w_1 + w_3 = 9$ and $\Delta = 7$, $t_1 + t_3 = 5$.
One sees that $d(u_1) + d(u_2) + d(u_3)\ge |[1, 4]| + |\{2\}| + |\{2, 7, 3, 5\}| + |\{4, 6\}| + 2t_{uv} = 2\Delta + 7$.
It follows that $4\in C(u_2)\setminus C(u_3)$ and $1\not\in S_u\setminus C(u_4)$.
Thus, $(uu_3, uu_1, uv)\overset{\divideontimes}\to (1, l, j)$\footnote{Or, by \ref{131notCu3356Cu1}, we are done. }.

In the other case, $5, 7\in S_u\setminus C(u_3)$.
One sees from Corollary \ref{auv2}(5.2) that $3\in C(u_4)$ or mult$_{S_u}(6)\ge 2$.
Then $\sum_{x\in N(u)\setminus \{v\}}d(x)\ge |\{1\}| + |\{2\}| + |\{2, 7, 3, 5\}| + |\{4, 6\}| + |\{4, 4, 6, 5, 7, 2/1, 3/6\}| + 2t_{u v} + t_0 = 2\Delta + 11 + t_0$.
If $T_3\subseteq T_0$, one sees that $t_3 = 2$ and $T_1\cap T_0 = \emptyset$,
it follows from Corollary \ref{auv2}(5.1) that $1\in S_u$,
and from Corollary \ref{auv2d(u)5}(9) that $2\in C(u_1)\cup C(u_4)$, a contradiction.
Otherwise, $T_3\setminus T_0 \ne \emptyset$ and from Corollary \ref{auv2}(5.1) that $3\in C(u_2)$,
or $H$ contains a $(3, 1/4)_{(u_2, u)}$-path.

{Case 2.1.} mult$_{S_u}(6)\ge 2$.
Then $\sum_{x\in N(u)\setminus \{v\}}d(x)\ge 2\Delta + 11 + |\{6\}| + t_0 = 2\Delta + 12 + t_0$.

{Case 2.1.1.} $T_1\subseteq T_0$.
It follows that $t_1 = 1$, mult$_{S_u}(i) = 1$, $i = 3, 5$,
and $1\in S_u$ or $2\in S_u\setminus C(u_3)$.
\begin{itemize}
\parskip=0pt
\item
     $2\in S_u\setminus C(u_3)$.
     Then $1\not\in S_u$.
     It follows from Corollary~\ref{auv2d(u)5}(7.1) that $W_1 = \{1, 2, 4\}$.
     One sees that $d(u_1) + d(u_2) + d(u_3)\ge |\{1, 2, 4\}| +  |\{2\}| +  |\{2, 7, 3, 5\}| +  |\{4, 6, 6\}| + 2t_{u v} = 2\Delta + 7$.
     Then $1, 3\not\in S_u\setminus C(u_4)$, $4\in C(u_2)\setminus C(u_3)$,
     and $H$ contains a $(4, 3)_{(u_2, u_4)}$-path.
     Thus, by \ref{1331switch}, we are done.
\item
     $2\not\in S_u\setminus C(u_3)$.
     It follows from \ref{136236126inCu1Cu3}(1) that $W_1 = \{1, 4, 3/6\}$.
     When $W_1 = \{1, 4, 3\}$, it follows from \ref{136236126inCu1Cu3}(1) that $H$ contains a $(3, 6)_{(u_1, u_3)}$-path.
     One sees $3\not\in C(u_4)$ and $H$ contains no $(j, i)_{(u_4, v)}$-path.
     We are done by Corollary~\ref{auv2}(3), we are done.
     In the other case, $W_1 = \{1, 4, 6\}$.
     It follows from \ref{315234} that $5\in C(u_4)$ and $H$ contains a $(4, 5)_{(u_1, u_4)}$-path,
     and then from \ref{135notCu2134}(1) that $3\in C(u_2)$.
     One sees that $d(u_1) + d(u_2) + d(u_3)\ge |\{1, 4, 6\}| +  |\{2, 3\}| +  |\{2, 7, 3, 5\}| +  |\{4, 6\}| + 2t_{u v} = 2\Delta + 7$.
     It follows from \ref{1331switch} that $1\in C(u_4)$ and $H$ contains a $(4, 1)_{(u_3, u_4)}$-path.
     Thus, $(u u_1, u u_2, u u_3, u v)\overset{\divideontimes}\to (3, 1, T_3, 2)$.
\end{itemize}

{Case 2.1.2.} $T_1\setminus T_0\ne \emptyset$.
It follows from Corollary~\ref{auv2}(5.1) that $1\in S_u$,
and from Corollary~\ref{auv2d(u)5}(9) that $2\in S_u\setminus C(u_3)$.
Then $\sum_{x\in N(u)\setminus \{v\}}d(x)\ge 2\Delta + 12 + |\{1/2\}| +t_0 = 2\Delta + 13 + t_0$.
It follows that $t_0 = 0$, mult$_{S_u}(1) = $ mult$_{S_u}(3) = 1$, mult$_{S_u}(2) = $ mult$_{S_u}(4) = 2$.
If $T_1\subseteq C(u_2)$, one sees that $d(u_1) + d(u_2) + d(u_3) = |\{1, 2\}| + |\{2\}| + |\{3, 5, 2, 7\}| + |\{4, 4, 6, 6\}| + 2t_{u v} = 2\Delta + 7$,
then $1, 3\not\in S_u\setminus C(u_4)$, $4\in C(u_2)\setminus C(u_3)$, and $H$ contains a $(4, 3)_{(u_2, u_4)}$-path.
Thus, by \ref{1331switch}, we are done.
Otherwise, $T_1\setminus C(u_2)\ne \emptyset$.
Thus, by Lemma \ref{132T34T113onlyone}, we are done.

{Case 2.2.} mult$_{S_u}(6) = 1$.
One sees from \ref{136236126inCu1Cu3}(1--2) that $1\in C(u_3)$ or $3\in C(u_1)$,
and from Corollary \ref{auv2}(5.2) that if there exists a $j\in T_3\cap T_4$ such that $H$ contains no $(3, j)_{(u_4, v_3)}$-path, then mult$_{S_u}(6)\ge 2$.
Hence, $H$ contains a $(3, T_3\cap T_4)_{(u_4, v_3)}$-path and $3\in C(u_4)$.
It follows from \ref{136Cu34T1}(1) that $2\in C(u_1)$, or $H$ contains a $(4, 2)_{(u_1, u_4)}$-path,
and from \ref{136Cu34T1}(2) that $T_1\subseteq C(u_2)\cap C(u_4)$.
Hence, $\sum_{x\in N(u)\setminus \{v\}}d(x)\ge |\{1\}| + |\{2\}| + |\{2, 7, 3, 5\}| + |\{4, 6\}| + |\{4, 4, 6, 5, 7, 2, 3, 1/3\}| + 2t_{u v} + t_0 = 2\Delta + 12 + t_0\ge 2\Delta + 12 + t_1\ge 2\Delta + 12 + 1 = 2\Delta + 13$.
It follows that $t_1 = 1$, mult$_{S_u}(i) = 1$, $i = 5, 7$, mult$_{S_u}(1) + $mult$_{S_u\setminus C(u_4)}(3) = 1$,
and $H$ contains a $(4, 3)_{(u_2, u_4)}$-path.

(2.2.1.) $6\in C(u_3)$.
Then $3\in C(u_1)$ and $1\not\in S_u$.
It follows from \ref{136Cu34T1}(3) that $4\in C(u_1)$ and then $W_1 = \{1, 3, 4\}$.
Then $(u u_1, u u_3)\to (6, 1)$ reduces the proof to (2.4.2.)
\footnote{Or, by \ref{131notCu3356Cu1}, we are done.}.

(2.2.2.) $6\in C(u_1)$.
Then $1\in C(u_3)$ and $(u u_3, u u_1)\to (6, 3)$ reduces the proof to (2.4.1.)
\footnote{Or, by \ref{136notCu16notCu3}(2.1), we are done.}.
\end{proof}

By Lemma \ref{13T3inCu4}, hereafter we assume that $T_3\subseteq C(u_4)$.
One sees that $t_3\ge 2$ and $w_1 + w_4 + t_{u v} + t_3 \le d(u_1) + d(u_4)\le \Delta + 7$.
Hence, $w_1 + w_4 + t_3 \le 9$ and $w_1 + w_4\le 7$.

\begin{lemma}
\label{135766}
\begin{itemize}
\parskip=0pt
\item[{\rm (1)}]
	$5\in S_u\setminus C(u_3)$, $7\in S_u$, and mult$_{S_u}(7) +  $mult$_{S_u\setminus C(u_3)}(5)\ge 2$.
\item[{\rm (2)}]
	If $7\in C(u_3)$ and $H$ contains a $(7, j)_{(u_3, v)}$-path for some $j\in T_3$,
    then $7\in S_u\setminus C(u_3)$.
\item[{\rm (3)}]
    mult$_{S_u}(6)\ge 2$.
\item[{\rm (4)}]
    If mult$_{S_u}(7) + $mult$_{S_u\setminus C(u_3)}(5) = 2$, then $6\in C(u_3)$ and $H$ contains a $(6, T_3)_{(u_3, u_4)}$-path.
\item[{\rm (5)}]
    $T_4\ne \emptyset$.
\end{itemize}
\end{lemma}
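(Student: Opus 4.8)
The plan is to prove each of (1)--(5) by assuming its negation and showing that one then either extends the colouring of $H$ to $G$ (or reduces to $a_{uv}\le 1$, after which we are done by Lemma~\ref{auv1coloring}), or else reaches $\sum_{i\in[1,4]}d(u_i)\ge 2\Delta+14$, which contradicts the ($A_8$) threshold $\sum_{i}d(u_i)\le 2\Delta+13$ recorded after Eq.~(\ref{eq16}). The counting backbone throughout is
\[
\sum_{i\in[1,4]}d(u_i)\ \ge\ \sum_{i\in[1,4]}w_i+2t_{uv}+\sum_{j\in T_0}\bigl(\mathrm{mult}_{S_u}(j)-2\bigr)
\]
from Eq.~(\ref{eq17}), with $t_{uv}=\Delta-2$. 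I would lean on the facts already secured in this subcase: $c(vu_3)=5$, $c(vu_4)=6$, $6\in C(u_4)$, $T_3\subseteq C(u_4)$ (Lemma~\ref{13T3inCu4}), $w_3\ge4$ and $t_3\ge2$ (Lemma~\ref{13t3ge2}), $w_1\ge3$, $T_1\ne\emptyset$ and $t_1+t_3\ge3$ (Lemma~\ref{13T1neemptyset}), together with $5\in S_u\setminus C(u_3)$ and $5/2/3/4\in C(u_1)$ from \ref{315234}.

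For (1) the containment $5\in S_u\setminus C(u_3)$ is immediate from \ref{315234}, so it remains to force $7\in S_u$ and to bound $\mathrm{mult}_{S_u}(7)+\mathrm{mult}_{S_u\setminus C(u_3)}(5)$ below by $2$. I would analyse the $6$- and $7$-coloured Kempe chains blocking the colours of $T_3$ at $u_3$: since each $\beta\in T_3$ lies in $C(u_4)$, \ref{1325254545} (the case where some $\beta\in T_3$ admits no $(6/7,\beta)_{(u_3,v)}$-path) and \ref{137Cu2344T1} (the complementary $(7,\beta)_{(v,v_7)}$-path case) force $7$ to reappear among $C(u_1),C(u_2),C(u_4)$, and, when the sole blocking chain runs through $5$, force a second occurrence of $5$ off $u_3$; these two sources give the required multiplicity. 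Part (2) is a localisation of the same phenomenon: if $7\in C(u_3)$ and a $(7,\beta)_{(u_3,v)}$-path exists, Lemma~\ref{lemma06} makes the maximal $(7,\beta)$-path emanating from $v$ unique, and were $7$ absent from $C(u_1)\cup C(u_2)\cup C(u_4)$ one could recolour $uu_3$ (after freeing $5$ at $vu_3$) to drop $a_{uv}$, so $7\in S_u\setminus C(u_3)$.

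For (3) I would suppose $\mathrm{mult}_{S_u}(6)=1$; since $6\in C(u_4)$ this means $6\notin C(u_1)\cup C(u_2)\cup C(u_3)$. Then \ref{136236126inCu1Cu3}(1)--(2) supply a $(6,i)_{(u_1,u_i)}$-path and a $(6,i)_{(u_3,u_i)}$-path and put $2/3\in C(u_1)$, while \ref{136Cu34T1} yields $7\in S_u$, $2\in B_1\cup B_3$, $T_4\ne\emptyset$ and the chains $(j,1/3/5/7)_{(u_4,v)}$ for $j\in T_4$. Feeding $w_1\ge3$, $w_3\ge4$, $w_4\ge2$, $2\in C(u_2)$, and the forced multiplicities of $5,7$ from (1) into Eq.~(\ref{eq17}) should exceed $2\Delta+13$, the contradiction. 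Part (4) treats the tight case $\mathrm{mult}_{S_u}(7)+\mathrm{mult}_{S_u\setminus C(u_3)}(5)=2$, where the budget leaves no slack: the $T_3$-colours not blocked through $5$ or $7$ must be blocked through $6$, so Corollary~\ref{auv2}(4) forces $6\in C(u_3)$ with a $(6,\beta)_{(u_3,u_4)}$-path for every $\beta\in T_3$ (recall $T_3\subseteq C(u_4)$). Finally, for (5), if $T_4=\emptyset$ then $T_{uv}\subseteq C(u_4)$ forces $C(u_4)=\{4,6\}\cup T_{uv}$ and $d(u_4)=\Delta$; for $\Delta\ge8$ this alone violates the ($A_8$) requirement $d(u_4)\le7$, and for $\Delta\in\{6,7\}$ the absence of $5$ from $C(u_4)$ lets me recolour $uu_4$ (and, if needed, $vu_4$) to reduce to (2.4.1) or (2.4.2) or to extend directly.

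The main obstacle is part (3) together with its entanglement with (1) and (4): because the ($A_8$) budget is only $2\Delta+13$, every colour multiplicity and every forced bichromatic chain has to be accounted for exactly, with no double counting between the $w_i$ and the $\mathrm{mult}_{S_u}$ contributions. The delicate core is deciding which of $u_3,u_4$ carries colour $6$ and simultaneously tracking the $(5,\cdot)$, $(6,\cdot)$ and $(7,\cdot)$ paths through $u_3$ and $u_4$, since $T_3\subseteq C(u_4)$ couples their behaviour tightly; pinning down the exact multiplicities of $5,6,7$ is precisely what decides between a successful recolouring and the $2\Delta+14$ count.
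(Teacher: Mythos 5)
Your overall strategy for each part (negate it, harvest structure from the standing propositions, then either recolour or push $\sum_{i}d(u_i)$ past the $(A_8)$ budget of $2\Delta+13$) is the same as the paper's in spirit, but the concrete steps you propose do not close, and the one you yourself flag as the core --- part (3) --- genuinely fails as sketched. Under $\mathrm{mult}_{S_u}(6)=1$, feeding $w_1\ge 3$, $w_3\ge 4$, $w_4\ge 2$ and the forced occurrences of $5$ and $7$ into Eq.~(\ref{eq17}) does \emph{not} exceed $2\Delta+13$: the paper's own count at exactly this point reaches only $2\Delta+10+t_0$ (after first boosting $\mathrm{mult}_{S_u}(7)+\mathrm{mult}_{S_u\setminus C(u_3)}(5)$ to $3$ using parts (1), (2) and Corollary~\ref{auv2}(4), a synergy your sketch omits), and it then needs over a page of further analysis --- splitting on whether the chain forced by \ref{136236126inCu1Cu3} is a $(3,6)_{(u_1,u_3)}$-path (so $6\in C(u_3)$, $3\in C(u_1)$) or a $(1,6)_{(u_1,u_3)}$-path (so $6\in C(u_1)$, $1\in C(u_3)$), then on $2\in C(u_1)$ or not, $T_1\subseteq C(u_4)$ or not, $1\in C(v_4)$ or not --- with each branch ending in a bespoke recolouring such as $(uu_1,uu_2,uu_3)\to(3,1,6)$ or an appeal to \ref{1331recoloring} and Corollary~\ref{auv2d(u)5}(2.1). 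A pure counting finish is not available here.

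There are also gaps in (1), (2) and (5). For (1), you cannot take $5\in S_u\setminus C(u_3)$ as immediate from \ref{315234}: since $c(vu_3)=5$ we have $5\in C(u_3)$ for free, and the derivation of \ref{315234} leaves open the case where the only obstruction to $uu_1\to 5$ is a $(5,3)_{(u_1,u_3)}$-path, which places $5$ only at $u_3$; the paper spends a full paragraph (via \ref{135notSu} and a count reaching $2\Delta+14$) excluding precisely this. Likewise your route to $7\in S_u$ through \ref{1325254545} and \ref{137Cu2344T1} only produces $7$ when some $(7,j)_{(u_3,v_7)}$-path exists; in the branch where Corollary~\ref{auv2}(4) is satisfied for every $j\in T_3$ through $5\in B_1$ those propositions say nothing about $7$, yet the lemma asserts $7\in S_u$ unconditionally --- the paper gets it by assuming $7\notin S_u$, invoking Corollary~\ref{auv2}(5.1) to force $2,4\in C(u_1)\cap C(u_3)$ and $T_1\cup T_3\subseteq C(u_2)\cap C(u_4)$, counting to $2\Delta+13$, and finishing with $(uu_1,uu_i,uv)\to(5,1,T_3)$. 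For (2), recolouring $uu_3$ is blocked outright since $7\in C(u_3)$ is the hypothesis; the actual argument recolours $uu_2$ or $uu_4$ to $7$ together with $uv$, and so needs $2,4\in B_1\cup B_3$ plus three further subcases. For (5) with $\Delta=7$, recolouring $uu_4$ to anything outside $C(u_4)=\{4,6\}\cup T_{uv}$ only enlarges $C(u)\cap C(v)$; the paper instead derives $d(u_1)+d(u_2)+d(u_3)\ge 2\Delta+8$ from \ref{135Cu42T34B1B3}(2) and Corollary~\ref{auv2d(u)5}(7.2). Part (4) is the only part where your argument essentially coincides with the paper's.
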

\begin{proof}
For (1), assume that $5\not\in S_u\setminus C(u_3)$.
It follows from \ref{315234} that $3\in C(u_1)$,
from \ref{135Cu42T34B1B3}(1) that $2\in C(u_3)$, $T_3\subseteq C(u_2)$,
from \ref{135Cu42T34B1B3}(2) that $4\in C(u_2)\cup C(u_3)$,
and from \ref{135notSu} that $2, 4\in C(u_1)$, $T_1\subseteq C(u_2)\cap C(u_4)$.
If there exists a $j\in \{1, 7\}\setminus S_u$, then $(u u_1, u u_3, u v)\overset{\divideontimes}\to (5, 7, T_3)$.
Otherwise, $1, 7\in S_u$.
Thus, $\sum_{x\in N(u)\setminus \{v\}}d(x)\ge |[1, 4]| + |\{2\}| + |\{3, 5, 2\}| + |\{4, 6\}| + |\{4, 6, 1, 7\}| + 2t_{u v} + t_0\ge 2\Delta + 10 + t_1 + t_3\ge 2\Delta + 10 + 2 + 2 = 2\Delta + 14$, a contradiction.
Hence, $5\in S_u\setminus C(u_3)$.

Assume that $7\not\in S_u$.
It follows from Corollary~\ref{auv2}(5.1) that $2, 4\in C(u_1)\cap C(u_3)$ and $T_1\cup T_3\subseteq C(u_2)\cap C(u_4)$,
and from \ref{136Cu34T1}(1) that mult$_{S_u}(6)\ge 2$.
Since $6\in C(u_1)\cup C(u_3)$, $t_1 + t_3\ge w_1 + w_3 - 4\ge 8 - 4 = 4$.
Thus, $\sum_{x\in N(u)\setminus \{v\}}d(x)\ge |\{1, 2, 4\}| + |\{2\}| + |[2, 5]| + |\{4, 6\}| + |\{5, 6, 6\}| + 2t_{u v} + t_0\ge 2\Delta + 9 + t_1 + t_3\ge 2\Delta + 9 + 4 = 2\Delta + 13$.
It follows that $1, 3\not\in S_u$, mult$_{S_u\setminus C(u_3)}(5) = 1$, and $5\in C(u_i)$ for some $i\in \{2, 4\}$.
Then $(u u_1, u u_i, u v)\overset{\divideontimes}\to (5, 1, T_3)$.

Hence, $7\in S_u$ and then mult$_{S_u}(7) +  $mult$_{S_u\setminus C(u_3)}(5)\ge 2$.

For (2), assume that $7\in C(u_3)$, $H$ contains a $(7, 8)_{(u_3, v)}$-path, and $7\not\in S_u\setminus C(u_3)$.
It follows from \ref{137Cu2344T1}(1) that $3\in C(u_1)$ and $H$ contains a $(3, 7)_{(u_1, u_3)}$-path,
and the from \ref{137Cu2344T1}(2.1) that $4\in C(u_1)$, $T_1\subseteq C(u_4)$,
from \ref{137Cu42T1}(2) that $2\in C(u_1)$, $T_1\subseteq C(u_2)$.
If $2\not\in B_1\cup B_3$, then $(u u_2, u v)\overset{\divideontimes}\to (7, 2)$;
if $4\not\in B_1\cup B_3$, then $(u u_4, u v)\overset{\divideontimes}\to (7, 4)$.
Otherwise, $2, 4\in B_1\cup B_3$.
\begin{itemize}
\parskip=0pt
\item
     $2\not\in C(u_3)\cup C(u_4)$. One sees that $2\in B_1$ and $H$ contains no $(2, 4)_{(u_3, u_4)}$-path.
     If $H$ contains no $(7, j)_{(u_2, v_7)}$-path for some $j\in T_1$,
     then $(u u_2, u u_3, uv)\overset{\divideontimes}\to (7, 2, j)$.
     otherwise, $H$ contains a $(7, T_1)_{(u_2, v_7)}$-path and $(u u_3, u u_4, u v)\overset{\divideontimes}\to (7, 8, T_1)$.
\item
     $4\not\in C(u_2)\cup C(u_3)$.
     One sees that $4\in B_1$ and $H$ contains no $(2, 4)_{(u_2, u_3)}$-path.
     If $H$ contains no $(7, j)_{(u_2, v_7)}$-path for some $j\in T_1$,
     then $(u u_2, u u_3, u v)\overset{\divideontimes}\to (7, 8, j)$.
     otherwise, $H$ contains a $(7, T_1)_{(u_2, v_7)}$-path and $(u u_3, u u_4, u v)\overset{\divideontimes}\to (4, 7, T_1)$.
\item
     $2\in C(u_3)\cup C(u_4)$ and $4\in C(u_2)\cup C(u_3)$.
     If $1\not\in C(u_1)$ and $H$ contains no $(1, i)_{(u_3, u_i)}$-path,
     then $(u u_1, u u_3, u v)\to (7, 1, 8)$.
     Otherwise, $1\in C(u_1)$ or $H$ contains a $(1, 2/4)_{(u_3, u)}$-path.
     Then $\sum_{x\in N(u)\setminus \{v\}}d(x)\ge |[1, 4]| + |\{2\}| + |\{3, 5, 7\}| + |\{4, 6\}| + |\{5, 6, 2, 4, 1\}| + 2t_{u v} + t_0\ge 2\Delta + 11 + t_1\ge 2\Delta + 11 + 2 = 2\Delta + 13$.
     It follows that $t_1 = 2$,  mult$_{S_u}(6) = 1$, $T_3\cap T_1 = \emptyset$ and then $4\in C(u_3)$.
     One sees that $2\in C(u_3)\cup C(u_4)$ and $d(u_3) + d(u_4)\ge |\{3, 5, 7, 4\}| + |\{4, 6\}| + |\{2\}| + t_{uv} + t_1\ge \Delta + 5 + 2 = \Delta + 7$.
     It follows that $1\in C(u_2)$, $H$ contains a $(1, 2)_{(u_2, u_3)}$-path and $2\in C(u_3)$.
     One sees that $W_4 = \{4, 6\}$,
     and $H$ contains no $(j, i)_{(u_4, v)}$-path for each $i\in C(v)$ and $j\in T_4$.
     Thus, by Corollary~\ref{auv2}(4), we are done.
\end{itemize}

For (3), assume that mult$_{S_u}(6) = 1$.
It follows from \ref{136Cu34T1}(4) that $T_4\ne \emptyset$, $1/3/5/7\in C(u_4)$,
and from \ref{136Cu34T1}(1) that $2\in B_1\cup B_3$,
and $2\in C(u_1)$ or $H$ contains a $(2, 4)_{(u_1, u_4)}$-path,
i.e., $2\in C(u_1)$ or $2\in C(u_3)\cap C(u_4)$.
By Corollary~\ref{auv2}(4), $H$ contains a $(7, T_3)_{(u_3, v_7)}$-path,
or $5\in B_1\cap C(u_1)$, mult$_{S_u\setminus C(u_3)}(5)\ge 2$.
Hence, by (1) and (2), we have mult$_{S_u}(7) +  $mult$_{S_u\setminus C(u_3)}(5)\ge 3$.

When $6\in C(u_3)$, then $3\in C(u_1)$, $H$ contains a $(3, 6)_{(u_1, u_3)}$-path,
and it follows from \ref{136Cu34T1}(3) that $4\in C(u_1)$ and $H$ contains a $(4, T_1)_{(u_1, u_4)}$-path.
We distinguish whether or not $2\in C(u_1)$:
\begin{itemize}
\parskip=0pt
\item
     $2\in C(u_1)$.
     One sees that $t_1\ge 2$ and $d(u_3) + d(u_4)\ge t_{uv} + t_1 + |\{3, 5, 6, 2/4\}| + |\{4, 6, 5/1/3/7\}|\ge\Delta + 5 + t_1\ge \Delta + 5 + 2 = \Delta + 7$.
     It follows that $W_1 = [1, 4]$, $t_1 = 2$, $W_3 = \{3, 5, 6, a_3\}$, $W_4 = \{4, 6, 5/1/3/7\}$, $d(u_i) = \Delta = 7$, $i = 1, 3, 4$, $d(u_2)\le 6$,
     and from \ref{136notCu16notCu3}(1.1--1.2) that $H$ contains a $(a_3, \{1, 7\})_{(u_3, u_{a_3})}$-path, $1, 7\in C(u_{a_3})$.
     Hence, $a_3 = 2$, $\{1, 7\}\cup T_3\subseteq C(u_2)$, and $T_1\setminus C(u_2)\ne \emptyset$.
     One sees from $5\not\in C(u_1)$ that $H$ contains a $(6, T_3)_{(u_3, u_4)}$-path.
     Then $(u u_2, u u_3, u v)\overset{\divideontimes}\to (6, T_3, T_1\cap T_2)$.
\item
     $2\not\in C(u_1)$ and $2\in C(u_3)\cap C(u_4)$.
     One sees that $t_1\ge 1$ and $d(u_3) + d(u_4)\ge t_{u v} + t_1 + |\{3, 5, 6, 2\}| + |\{4, 6, 2, 5/1/3/7\}|\ge\Delta + 6 + t_1\ge \Delta + 6 + 1 = \Delta + 7$.
     It follows that $W_1 = \{1, 3, 4\}$, $W_3 = \{3, 5, 6, 2\}$, $W_4 = \{4, 6, 2, 5/1/3/7\}$,
     $d(u_i) = \Delta = 7$, $i = 1, 3, 4$, $d(u_2)\le 6$, $t_1 = 1$, $t_3 = 2$, $c_{13} = 2$,
     and from \ref{136notCu16notCu3}(1.1--1.2) that $H$ contains a $(2, \{1, 7\})_{(u_3, u_{a_3})}$-path, $1, 7\in C(u_2)$.
     One sees that $1, 7\setminus C(u_4)\ne \emptyset$ and assume w.l.o.g. $1\not\in C(u_4)$\footnote{If $7\not\in C(u_4)$, then $u u_1\to 7$ reduces the proof to the case where $1\not\in C(u_4)$. }.
     Then $(u u_1, u u_4, u v)\overset{\divideontimes}\to (T_1, 1, T_3)$.
\end{itemize}

Hence, $6\in C(u_1)$, $1\in C(u_3)$, and $H$ contains a $(1, 6)_{(u_1, u_3)}$-path.
It follows from \ref{136notCu16notCu3} (2.1) that $3\in C(u_1)$, or $H$ contains a $(3, 2/4)_{(u_1, u)}$-path.
Since $7\in C(u_3)$ or $5\in C(u_1)$,
$t_1 + t_3\ge w_1 + w_3 - 4 = |\{1, 6, 2\}| + |\{3, 5, 1, 2/4\}| + |\{5/7\}| - 4 = 4$.
One sees that $\sum_{x\in N(u)\setminus \{v\}}d(x)\ge |\{1, 6, 2\}| +|\{2\}| + |\{3, 5, 1, 2/4\}| + |\{4, 6\}|+  |\{3\}| + $mult$_{S_u}(7) +  $mult$_{S_u\setminus C(u_3)}(5) + 2t_{u v} \ge 11 + 3 + 2t_{u v} + t_0 = 2\Delta + 10 + t_0$.
It follows that $(T_3\cup T_1)\setminus T_0\ne \emptyset$.
We distinguish whether or not $2\in C(u_1)$:
\begin{itemize}
\parskip=0pt
\item
    $2\in C(u_1)$.
    When $H$ contains no $(4, j)_{(u_3, u_4)}$-path for some $j\in T_3$,
    one sees from \ref{136Cu34T1}(2) that $T_1\subseteq T_0$ and then $T_3\setminus T_0\ne \emptyset$,
    i.e., $T_3\setminus C(u_2)\ne \emptyset$,
    thus, $H$ contains a $(4, T_3)_{(u_3, u_4)}$-path, a contradiction.
    In the other case, $4\in C(u_3)$ and $H$ contains a $(4, T_3)_{(u_3, u_4)}$-path.
    \begin{itemize}
    \parskip=0pt
    \item
         $2\not\in C(u_3)\cup C(u_4)$.
         It follows from Corollary~\ref{auv2}(5.1) that $3\in C(u_4)$ or $H$ contains a $(3, 1)_{u_4, u_1}$-path,
         from \ref{135Cu42T34B1B3} that $5\in C(u_4)$ or $H$ contains a $(5, 1)_{u_4, u_1}$-path,
         and from \ref{136Cu34T1}(2) that $T_1\subseteq T_0$.
         When $w_4 \ge 4$, it follows that $W_3 = \{1, 4, 3, 5\}$, $t_1 = 1$, $t_3 = 2$, $c_{13} = 2$,
         and then $W_1 = \{1, 2, 6\}$.
         Thus, $(v u_3, u v)\overset{\divideontimes}\to (T_3, 5)$
         \footnote{a contradiction to Corollary~\ref{auv2}(4). }.
         In the other case, $w_4\le 3$.
         Hence, $W_4 = \{1, 4, 6\}$, $3, 5\in C(u_1)$, and then $W_1 = \{1, 2, 3, 5, 6\}$, $W_3 = \{1, 3, 4, 5\}$. Thus, $(v u_3, u u_3, u v)\overset{\divideontimes}\to (T_3, 5, T_1)$
         \footnote{a contradiction to Corollary~\ref{auv2}(4). }.
    \item
         $2\in C(u_3)\cup C(u_4)$.
         One sees that $w_3 + w_4\ge |\{1, 3, 4, 5\}| + |\{4, 6, 5/1/3/7\}| + |\{2\}| = 8$.
         If $T_1\subseteq C(u_4)$, then $t_1 = 1$ and then $W_3 = \{1, 4, 3, 5\}$, $t_3 = 2$, $c_{13} = 2$,
         and then $W_1 = \{1, 2, 6\}$.
         Thus, $(v u_3, u v)\overset{\divideontimes}\to (T_3, 5)$
         \footnote{a contradiction to Corollary~\ref{auv2}(4). }.
         Otherwise, $T_1\setminus C(u_4)\ne \emptyset$.
         It follows from Corollary~\ref{auv2}(5.2) and (3) that $4\in B_1\cup B_3$, $4\in C(u_1)\cup C(u_2)$,
         and from \ref{136Cu34T1}(4) that $5/1/7\in C(u_4)$.
         Hence, $\sum_{x\in N(u)\setminus \{v\}}d(x)\ge 2\Delta + 10 + |\{2, 4\}| + t_0 = 2\Delta + 12 + t_0$.
         Since $t_3\ge 2$, we have $T_3\setminus C(u_2)\ne \emptyset$,
         and it follows from Corollary~\ref{auv2}(5.1) that $3\in C(u_4)$ or $H$ contains a $(3, 1/2)_{(u_4, u)}$-path,
         from \ref{135Cu42T34B1B3}(1) that $5\in C(u_4)$ or $H$ contains a $(5, 1/2)_{(u_4, u)}$-path.
         If $5\not\in C(u_1)\cup C(u_4)$, one sees that $H$ contains a $(2, 5)_{(u_2, u_4)}$-path,
         then $(u u_1, u u_3, u v)\overset{\divideontimes}\to (5, 6, T_3)$.
         Otherwise, together with \ref{136notCu16notCu3} (2.1), we have $3, 5\in C(u_1)\cup C(u_4)$.
         If mult$_{S_u}(1) + $mult$_{S_u}(3) = 2$,
         one sees from Corollary~\ref{auv2}(5.1) that $3\in C(u_2)$,
         then $(u u_1, u u_2, u u_3)\to (3, 1, 6)$, and $uv\overset{\divideontimes} \to 2$,
         or $uv\overset{\divideontimes} \to T_1\cap R_1$
         \footnote{Or, One sees that $(u u_1, u u_2, u u_3)\to (3, 1, 6)$ reduces to an acyclic edge $(\Delta + 5)$-coloring $c'$ for $H$ such that $C(u)\cap C(v) = \{1, 3\}$, $2\not\in C'(u)$,  then by \ref{1331recoloring}, we are done. }.
         Otherwise, mult$_{S_u}(1) + $mult$_{S_u}(3)\ge 3$.

         \quad It follows that $t_0 = 0$, mult$_{S_u}(7) + $mult$_{S_u\setminus C(u_3)}(5) = 3$,
         mult$_{S_u}(i) = 2$, $i = 2, 4$,
         and from \ref{1325254545}(2) that if there exists a $j\in T_3$ such that $H$ contains no $(j, 7)_{(u_3, v)}$-path, then $5\in B_1$, $H$ contains a $(4, 5)_{(u_3, u_4)}$-path, and $5\in C(u_3)\cap C(u_4)$.

         \quad If $2\in C(u_3)$, one sees from $7\in C(u_3)$ or $5\in C(u_1)$ that $3, 4\not\in C(u_1)$ and from \ref{136notCu16notCu3}(2.1) that $H$ contains a $(2, 3)_{(u_1, u_2)}$-path,
         then $(u u_1, u u_2, u u_3)\to (3, 1, 6)$, and $uv\overset{\divideontimes} \to 2$,
         or $uv\overset{\divideontimes} \to T_1\cap R_1$
         \footnote{Or, One sees that $(u u_1, u u_2, u u_3)\to (3, 1, 6)$ reduces to an acyclic edge $(\Delta + 5)$-coloring $c'$ for $H$ such that $C(u)\cap C(v) = \{1, 3\}$, $2\not\in C'(u)$,  then by \ref{1331recoloring}, we are done. }.
         Otherwise, $2\in C(u_4)\setminus C(u_3)$ and then $2\in B_1$, $T_1\cap R_1\ne \emptyset$.

         \quad If $3\in C(u_1)\cap C(u_2)$, one sees that $1\not\in S_u\setminus C(u_3)$,
         then $uv\to \alpha\in T_1\cap R_1$, and $(u u_1, u u_2, u u_3, u u_4)\to (T_1\setminus \{\alpha\}, 1, 2, 3)$.
         Otherwise, $3\not\in C(u_1)\cap C(u_2)$.
         \begin{itemize}
         \parskip=0pt
         \item
              $7\in C(u_3)$.
              It follows that $W_3 = \{1, 3, 4, 5, 7\}$, $W_4 = \{4, 6, 2, 5/1/7\}$.
              Together with $w_1 + w_3\le 9$ and $3, 5\in C(u_1)\cup C(u_3)$,
              we have $W_1 = \{1, 2, 3, 6\}$, $W_4 = \{4, 6, 2, 5\}$, and then $3\in C(u_2)$.
              Thus, are done.
         \item
              $7\not\in C(u_3)$. Then $5\in C(u_1)\cap C(u_4)$.
              It follows from \ref{136notCu16notCu3}(2.2) that $7\in C(u_1)$ or $H$ contains a $(7, 2/4)_{(u_1, u)}$-path.
              If $7\not\in C(u_4)$ and $H$ contains no $(7, i)_{(u_4, u_i)}$-path for each $i\in \{1, 2\}$,
              then $(u u_4, u u_3, u v)\overset{\divideontimes}\to (7, T_3, T_1)$.
              Otherwise, $7\in C(u_4)$ or $H$ contains a $(7, 1/2)_{(u_4, u)}$-path.
              Hence, $7\in C(u_1)\cup C(u_4)$ and $W_1\cup W_4 = \{1, 2, 5, 6\}\cup \{4, 5, 6\}\cup \{3, 7\}$.
              We will have $7\in C(u_2)$ no matter on whether or not $7\in C(u_1)$, a contradiction.
         \end{itemize}
    \end{itemize}
\item
     $2\not\in C(u_1)$ and $2\in C(u_3)\cap C(u_4)$.
     One sees that $d(u_3) + d(u_4)\ge t_{uv} - t_1 + t_3 + |\{1, 4, 6\}| + |\{4, 6, 2, 5/1/7/3\}| = \Delta + 5 + t_3 - t_1\ge \Delta + 7 - t_1$.
     If $T_1\subseteq C(u_4)$, then $W_1 = \{1, 4, 6\}$, $3\in C(u_4)$,
     and it follows from \ref{315234} that $5\in C(u_4)$, a contradiction.
     Otherwise, $T_1\setminus C(u_4)$ and it follows from \ref{136Cu34T1}(2) that $4\in C(u_3)$ and $H$ contains a $(4, T_3)_{(u_3, u_4)}$-path.
     Hence, $W_3 = [1, 5]$, $W_4 = \{4, 6, 2, 5/1/3/7\}$, $T_1\cap C(u_4) = \emptyset$, and then $W_4 = \{4, 6, 2, 5/1/7\}$, $3\in C(u_1)$.
     Thus, $w_1 + w_4\ge |\{1, 6, 4, 3, 7\}| + 5 = 10$,  a contradiction.
\end{itemize}

For (4), if mult$_{S_u}(7) + $mult$_{S_u\setminus C(u_3)}(5) = 2$, then it follows from (1) and (2) naturally that $6\in C(u_3)$ and $H$ contains a $(6, T_3)_{(u_3, u_4)}$-path.

For (5), it $C(u_4) = T_{uv}\cup \{4, 6\}$, it follows from \ref{135Cu42T34B1B3}(2) that mult$_{S_u}(4)\ge 2$,
from Corollary \ref{auv2d(u)5}(7.2) that $2\in C(u_1)\cap C(u_3)$ and $T_1\cup T_3\subseteq C(u_2)$,
thus, $d(u_1) + d(u_2) + d(u_3)\ge |\{1, 2\}| + |\{2\}| + |\{3, 5, 2\}| + |\{5, 7, 4, 4, 6, 6\}| + 2t_{u v} = 2\Delta + 8$.
Otherwise, $T_4\ne \emptyset$.
\end{proof}

\begin{lemma}
\label{13T1notinCu4T3notinCu2}
$T_1\setminus C(u_4)\ne \emptyset$ and $T_3\setminus C(u_2)\ne \emptyset$.
\end{lemma}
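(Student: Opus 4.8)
The plan is to prove both non-containments by contradiction, taking advantage of the ``easy'' containments already in hand. By Lemma~\ref{13T42Cu1}(2) we have $T_1\subseteq C(u_2)$, by Lemma~\ref{13T3inCu4} we have $T_3\subseteq C(u_4)$, and, since $T_{uv}\subseteq B_{i_1}\cup B_{i_2}\subseteq C(u_1)\cup C(u_3)$ while $T_1\cap C(u_1)=\emptyset$ and $T_3\cap C(u_3)=\emptyset$, we also have the definitional $T_1\subseteq C(u_3)$ and $T_3\subseteq C(u_1)$. Consequently, assuming $T_1\subseteq C(u_4)$ would put every colour of $T_1$ simultaneously in $C(u_2),C(u_3),C(u_4)$, and assuming $T_3\subseteq C(u_2)$ would put every colour of $T_3$ simultaneously in $C(u_1),C(u_2),C(u_4)$; in either case the offending set lies in $T_0$, so that $t_0\ge t_1$ (respectively $t_0\ge t_3\ge 2$ by Lemma~\ref{13t3ge2}). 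This is the lever that drives the counting.

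First I would dispose of $T_3\subseteq C(u_2)$. Feeding $T_3\subseteq T_0$ into the identity~\eqref{eq17} together with the guaranteed palette sizes $w_1\ge 3$ and $w_3\ge 4$ (Lemmas~\ref{13T1neemptyset} and~\ref{13t3ge2}), the guaranteed entries $2\in W_2$ and $\{4,6\}\subseteq W_4$, the forced colours $a_1\in\{2,4\}\cap C(u_1)$, $6/2/3\in C(u_1)$, $a_3\in\{2,4\}\cap C(u_3)$, $6/1/2\in C(u_3)$ from Proposition~\ref{136236126inCu1Cu3}, and the extra multiplicities $\mathrm{mult}_{S_u}(6)\ge 2$ and $\mathrm{mult}_{S_u}(7)+\mathrm{mult}_{S_u\setminus C(u_3)}(5)\ge 2$ from Lemma~\ref{135766}(1),(3), I would bound $\sum_{i\in[1,4]}d(u_i)$ from below. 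In every branch where the palettes are not forced to be rigid, this sum reaches $2\Delta+14$, contradicting the degree bounds recorded just after~\eqref{eq16}, which give $\sum_{i\in[1,4]}d(u_i)\le 2\Delta+13$ whenever $G$ contains ($A_8$).

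The assumption $T_1\subseteq C(u_4)$ is handled the same way: now $T_1\subseteq C(u_2)\cap C(u_3)\cap C(u_4)$ yields $T_1\subseteq T_0$, and since $2\in C(u_1)$ with $H$ containing a $(2,T_1)_{(u_1,u_2)}$-path (Lemma~\ref{13T42Cu1}(2)) the colour $2$ is already charged into $W_1$. Supplementing the same count with Corollary~\ref{auv2}(5.1)--(5.2) to force $1,3,5,7\in S_u$ where needed and to locate the appropriate $a_3\in\{2,4\}\cap C(u_3)$, the total again exceeds $2\Delta+13$ outside the rigid configurations.

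The main obstacle, exactly as elsewhere in this subsection, is the tight regime where the counting meets equality: $\Delta=7$, $d(u_3)=d(u_4)=7$, and the exact profile $\mathrm{mult}_{S_u}(1)=\mathrm{mult}_{S_u}(3)=1$, $\mathrm{mult}_{S_u}(2)=\mathrm{mult}_{S_u}(4)=2$, which pins $W_1,W_2,W_3,W_4$ down to the $(1,3)$/$(2,4)$ symmetry. Here counting is exhausted and I would finish by a Kempe-type recolouring: Proposition~\ref{1331switch} turns a $(1,3)$-swap into a colouring that frees $2$ or $4$, and Proposition~\ref{1331recoloring} then certifies that we are done; alternatively the reduction Propositions~\ref{131Cu4232T1} and~\ref{131Cu2344T1} drop the configuration to an earlier subcase. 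The colour actually assigned to $uv$ is supplied by $T_4\ne\emptyset$ (Lemma~\ref{135766}(5)) together with the path-existence clauses of Corollary~\ref{auv2d(u)5}. The delicate part is keeping the multiplicity bookkeeping consistent across these rigid branches, so that every tight palette configuration is matched to a valid swap or reduction.
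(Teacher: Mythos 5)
Your skeleton — push the offending set into $T_0$, run the count of Eq.~(\ref{eq17}) against the $(A_8)$ degree cap $2\Delta+13$, and clean up the tight branches with the $(1,3)$-switch — is the same as the paper's, but there is a genuine gap at the first step of the $T_1\subseteq C(u_4)$ half. The containment $T_1\subseteq C(u_2)$ that you import from Lemma~\ref{13T42Cu1}(2) is not available here: that lemma belongs to subcase (2.4.2.) ($c(vu_4)=1$), where the palettes are decomposed as $C(u_3)=W_3\cup T_4\cup C_{34}$, $C(u_4)=W_4\cup T_3\cup C_{34}$, not to the present subcase (2.4.3.) ($c(vu_4)=6$), where $C(u_1)=W_1\cup T_3\cup C_{13}$ and $C(u_3)=W_3\cup T_1\cup C_{13}$. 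In (2.4.3.) the statement ``$2\in C(u_1)$ and $H$ contains a $(2,T_1)_{(u_1,u_2)}$-path, hence $T_1\subseteq C(u_2)$'' is deduced only \emph{after} this lemma, precisely from $T_1\setminus C(u_4)\ne\emptyset$ via Corollary~\ref{auv2}(5.2) and (3); under your hypothesis $T_1\subseteq C(u_4)$ every escape of Corollary~\ref{auv2}(1.2) may use $a_1=4$, so $T_1$ is only guaranteed to lie in $C(u_3)\cap C(u_4)$, has multiplicity $2$, and contributes nothing to $t_0$. Your lever therefore collapses for the first claim, and this is why the paper's Case~2 argues differently: it extracts $C_{13}\ne\emptyset$, splits further on whether $T_1$ or $T_3$ meets $C(u_2)$, and finishes with reductions to (2.4.1.)/(2.4.2.) and Proposition~\ref{1331switch}.

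Even for the $T_3\subseteq C(u_2)$ half, where the lever is sound ($T_3\subseteq C(u_1)$ definitionally and $T_3\subseteq C(u_4)$ by Lemma~\ref{13T3inCu4}), your quantitative claim that the count reaches $2\Delta+14$ outside rigid configurations is not established and does not match the paper's own computation, which starts at $2\Delta+8$ and, even after adding all the multiplicities you list, stalls at $2\Delta+13$ in several branches — a value consistent with $(A_8)$ when $\Delta=7$. Each equality branch then requires its own specific recolouring or reduction, and identifying which one applies where is the bulk of the proof rather than residual bookkeeping. Note finally that the tight profile you name ($\mathrm{mult}_{S_u}(1)=\mathrm{mult}_{S_u}(3)=1$, $\mathrm{mult}_{S_u}(2)=\mathrm{mult}_{S_u}(4)=2$) is the hypothesis of Lemma~\ref{132T34T113onlyone}, whose remaining hypotheses are $T_1\setminus C(u_2)\ne\emptyset$ and $T_3\setminus C(u_4)\ne\emptyset$ — the complementary non-containments to the ones at issue — so that lemma cannot be used to close your tight cases either.
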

\begin{proof}
One sees from \ref{1325254545}(1) that $5\in S_u\setminus C(u_3)$, $7\in S_u$,
from \ref{1325254545}(3) that mult$_{S_u}(6)\ge 2$,
and from Corollary~\ref{auv2d(u)5}(7.1) that $2, 4\in C(u_1)$, or $1\in S_u$,
and $2, 4\in C(u_3)$, or $3\in S_u$.
Then $|[1, 6]| + |\{5, 6, 7, 7\}| + 2t_{u v} = 2\Delta + 6$,
and $\sum_{x\in N(u)\setminus \{v\}}d(x)\ge |[1, 6]| + |\{5, 6, 7, 7\}| + |\{2, 4\}/\{a_1, 1\}| + |\{2, 4\}/\{a_3, 3\}| + 2t_{u v} + t_0 = 2\Delta + 10 + t_0$.
It follows that if $T_1\cup T_3\subseteq T_0$, then $t_1 = 1$, $t_3 = 2$.
One sees from \ref{1325254545}(4) that if $T_1\cup T_3\subseteq C(u_4)$, then $C_{13}\ne \emptyset$.
We discuss the following two subcases.

{\bf Case 1.} $T_1\cup T_3\subseteq C(u_2)$. Then $t_0\ge t_3$.
We distinguish whether or not $T_1\subseteq C(u_4)$:

{Case 1.1.} $T_1\setminus C(u_4)\ne \emptyset$.
Then $2\in C(u_1)$,
and it follows from Corollary~\ref{auv2}(5.1) that $1\in C(u_2)$ or $H$ contains a $(1, 3/4)_{(u_2, u)}$-path,
from Corollary~\ref{auv2}(5.2) that mult$_{S_u}(4)\ge 2$.
One sees from  Corollary~\ref{auv2d(u)5}(7.1) that $2\in C(u_3)$ or $3\in S_u$.
Hence, $\sum_{x\in N(u)\setminus \{v\}}d(x)\ge 2\Delta + 6 + |\{4, 4, 2, 1, 3/2\}| + t_0\ge 2\Delta + 11 + 2 = 2\Delta + 13$.
It follows that $\Delta = 7$, $t_3 = 2$, $T_1\cap C(u_4) = \emptyset$,
mult$_{S_u}(4) = 2$, mult$_{S_u}(i) = 1$, $i = 1, 5, 7$,
and from Lemma~\ref{135766}(3) that $6\in C(u_3)$.
One sees that $W_3 = \{3, 5, 6, 2/4\}$.
It follows from Corollary~\ref{auv2d(u)5}(7.1) that $3\in S_u$ and $2\not\in S_u\setminus C(u_1)$.
\begin{itemize}
\parskip=0pt
\item
     $1\in C(u_2)$.
     One sees that $1, 2\not\in C(u_4)$.
     It follows from Corollary~\ref{auv2}(5.1) and \ref{135Cu42T34B1B3}(1) that $3, 5\in C(u_4)$,
     and then from \ref{1331switch} that $4\in C(u_1)$.
     Then $(u u_1, u u_2, u v)\overset{\divideontimes}\to (T_1, 5, T_3)$
     \footnote{Or, by \ref{135notCu1}(1), we are done. }.
\item
     $1\in C(u_4)$.
     It follows from Corollary~\ref{auv2}(5.1) that $4\in C(u_2)$ and $H$ contains a $(1, 4)_{(u_2, u_4)}$-path,
     and from \ref{131notCu3356Cu1} and \ref{1331switch} that for each $j\in \{3, 5\}$, $j\in C(u_1)$, or $H$ contains a $(2, j)_{(u_1, u_2)}$-path.
     Together with Corollary~\ref{auv2}(5.1) and \ref{135Cu42T34B1B3}(1), we have $3, 5\in C(u_1)$.
     Let $(u u_2, u u_3)\to (5, 2)$.
     If $H$ contains no $(5, j)_{(u_1, v_1)}$-path for some $j\in T_1$,
     then $(u u_1, u v)\overset{\divideontimes}\to (j, T_3)$.
     Otherwise, $H$ contains a $(5, T_1)_{(u_1, u_2)}$-path and $uv\overset{\divideontimes}\to T_1$.
\end{itemize}

{Case 1.2.} $T_1\subseteq C(u_4)$.
Then $\sum_{x\in N(u)\setminus \{v\}}d(x)\ge 2\Delta + 10 + t_0\ge 2\Delta + 10 + t_1 + t_3\ge 2\Delta + 10 + 1 + 2 = 2\Delta + 13$.
It follows that $\Delta = 7$, $t_1 = 1$, $t_3 = 2$, $c_{13} = 2$, $C_{13}\cap (C(u_2)\cup C(u_4)) = \emptyset$,
$\sum_{i\in [1, 4]} $mult$_{S_u}(i) = 4$, mult$_{S_u}(i) = 1$, $i = 5, 7$,
and from Lemma~\ref{135766}(3) that $6\in C(u_3)$.
One sees that $W_3 = \{3, 5, 6, 2/4\}$ and then $3\in S_u$.
It follows from Corollary~\ref{auv2d(u)5}(4.2) that there exists a $\gamma\in \{2, 4\}$ such that mult$_{S_u}(\gamma)\ge 2$.
\begin{itemize}
\parskip=0pt
\item
     $2, 4\in C(u_1)$.
     It follows that $1\not\in S_u$ and from Corollary~\ref{auv2}(5.1) that $3\in C(u_{a_3})$.
     Thus, one sees from Corollary~\ref{auv2}(3) that mult$_{S_u}(j)\ge 2$, $j\in \{2, 4\}\setminus \{a_2\}$,
     a contradiction.
\item
     $\{2, 4\}\setminus C(u_1)\ne \emptyset$.
     Then $a_1 = a_3 = \gamma$, $4\not\in C(u_2)$, $2\not\in C(u_4)$,
     and from Corollary~\ref{auv2}(3) that $1/3\in C(u_\beta)$, where $\{\gamma, \beta\} = \{2, 4\}$.
     One sees that $1\not\in C(u_3)$ and $\beta\not\in C(u_\gamma)$.
     It follows from Corollary~\ref{auv2}(5.1) that $1\in C(u_\gamma)$ and then $3\in C(u_\beta)$.
     Then $(u u_\beta, u u_1, u v)\overset{\divideontimes}\to (1, \beta, T_3)$.
\end{itemize}

{\bf Case 2.} $(T_1\cup T_3)\setminus C(u_2)\ne \emptyset$ and $T_1\cup T_3\subseteq C(u_4)$.
Then $C_{13}\ne \emptyset$, $w_1 + w_3\le 9 - c_{13}$,
and for each $i\in \{1, 3\}$ where $T_i\setminus C(u_2)\ne \emptyset$,
$4\in C(u_i)$, and $H$ contains a $(4, T_i)_{(u_i, u_4)}$-path,
and from Corollary~\ref{auv2}(5.2) and (3) that $2\in B_1\cup B_3$,
for each $j\in \{1, 3\}$, $2\in C(u_j)$, or $H$ contains a $(2, 4)_{(u_j, u_4)}$-path.
One sees clearly that mult$_{S_u}(4)\ge 1$ and mult$_{S_u}(2)\ge 2$.

{Case 2.1.} $T_3\subseteq C(u_2)$ and $T_1\setminus C(u_2)\ne \emptyset$.
Then $4\in C(u_1)$ and follows from  Corollary~\ref{auv2}(5.1) that $1\in C(u_4)$ or $H$ contains a $(1, 2/3)_{(u_4, u)}$-path,
from Corollary~\ref{auv2d(u)5}(7.1) that $4\in C(u_3)$ or $3\in S_u$.
Hence, $\sum_{x\in N(u)\setminus \{v\}}d(x)\ge 2\Delta + 6 + |\{2, 2, 4, 1, 4/3\}| + t_0\ge 2\Delta + 11 + t_3\ge 2\Delta + 11 + 2 = 2\Delta + 13$.
It follows that $t_3 = 2$,  mult$_{S_u}(2) = 2$, mult$_{S_u}(i) = 1$, $i = 5, 7$,
and from Lemma~\ref{135766}(3) that $6\in C(u_3)$.
Since $W_3 = \{3, 5, 6, 2/4\}$, we have $3\in S_u$, $4\not\in S_u\setminus C(u_1)$,
$W_3 = \{3, 5, 6, 2\}$ and from Corollary~\ref{auv2}(5.1) that $3\in C(u_2)$ or $H$ contains a $(1, 3)_{(u_1, u_2)}$-path.
Since $1\in C(u_4)$ or $H$ contains a $(1, 2)_{(u_2, u_4)}$-path, $1\in C(u_2)$, it follows from \ref{1331switch} that $3\in C(u_1)$ or $H$ contains a $(3, 2/4)_{(u_1, u)}$-path.
\begin{itemize}
\parskip=0pt
\item
     $1\in C(u_2)$.
     One sees that $2\in C(u_4)$, and $3\in C(u_1)$ or $H$ contains a $(4, 3)_{(u_1, u_4)}$-path.
     Hence, $3\in C(u_1)$.
     Then $uu_3\to \alpha\in T_3$,
     and $(u u_1, u u_2, u u_4, u v) \overset{\divideontimes} \to (2, 4, 1, T_3\setminus \{j\})$.
\item
     $1\in C(u_4)$.
     One sees that $1, 4\not\in C(u_2)$.
     Hence, $3\in C(u_2)$ and $2\in C(u_1)$.
     Then $(u u_1, u u_2, u u_3, u u_4)\to (T_1, 4, 1, 3)$ reduces the proof to (2.1.2.).
\end{itemize}

{Case 2.2.} $T_1\subseteq C(u_2)$ and $T_3\setminus C(u_2)\ne \emptyset$.
Then $4\in C(u_3)$ and follows from Corollary~\ref{auv2}(5.1) that $3\in C(u_4)$ or $H$ contains a $(3, 1/2)_{(u_4, u)}$-path,
from Corollary~\ref{auv2d(u)5}(7.1) that $4\in C(u_1)$ or $1\in S_u$,
from \ref{135Cu42T34B1B3}(1) that $5\in C(u_4)$, or $H$ contains a $(5, 1/2)_{(u_4, u)}$-path.
Hence, $\sum_{x\in N(u)\setminus \{v\}}d(x)\ge 2\Delta + 6 + |\{2, 2, 4, 3, 4/1\}| + t_0\ge 2\Delta + 11 + t_1\ge 2\Delta + 11 + 1 = 2\Delta + 12$ and $|C_{13}\cap T_0|\le 1$.

When $3, 5\in C(u_1)\cup C(u_4)$,
one sees that $d(u_1) + d(u_3) + d(u_4)\ge 2t_{uv} + |\{1\}| + |\{2\}| + |\{3, 5, 4\}| + |\{4, 6\} | + |\{2, 2, 3, 5, 6\}| = 2\Delta + 7$.
It follows that $2\in C(u_1)$ and $1, 7\not\in C(u_1)\cup C(u_3)\cup C(u_4)$.
If $2\not\in C(u_4)$, one sees that $W_4 = [3, 6]$, $W_3 = [2, 5]$ and $W_1 = \{1, 2, 6\}$,
then $(vu_3, uv)\overset{\divideontimes} \to (T_3, 5)$.
Otherwise, $2\in C(u_4)\setminus C(u_3)$, $2\in B_1$, and from \ref{1343567V1}(1) that $R_1\ne \emptyset$.
It follows from Corollary~\ref{auv2}(5.1) that $H$ contains a $(2, 7)_{(u_2, u_4)}$-path.
Since $u u_1\to 7$ and $(u u_1, u u_3)\to (7, 1)$ reduce to an acyclic edge $(\Delta + 5)$-coloring for $H$,
by \ref{1331switch}, we are done.

In the other case, $\{3, 5\}\setminus (C(u_1)\cup C(u_4))\ne \emptyset$.
It follows that $2\in C(u_4)$ and $H$ contains a $(2, \{3, 5\}\setminus (C(u_1)\cup C(u_4)))_{(u_2, u_4)}$-path.
If $H$ contains no $(2, j)_{(u_2, u_4)}$-path for some $j\in \{3, 5\}\setminus C(u_4)$,
one sees that $1\in C(u_4)$, $H$ contains a $(1, j)_{(u_1, u_2)}$-path,
and then $W_1 = \{1, 2/4, j\}$, $W_4 = \{4, 6, 2, 1\}$.
Then $(u u_1, u u_4, u u_3)\to (\{3, 5\}\setminus \{j\}, j, T_3\cap T_2)$ reduces the proof to (2.4.1.) or (2.4.2.).
Otherwise, $H$ contains a $(2, \{3, 5\}\setminus C(u_4))_{(u_2, u_4)}$-path.
\begin{itemize}
\parskip=0pt
\item
    $1\not\in S_u$. One sees that $2, 4\in C(u_1)$.
    If $5\not\in C(u_1)\cup C(u_4)$,
    then $(u u_1, u u_4, u u_3)\to (5, 1, T_3\cap T_2)$ reduces the proof to (2.4.1.).
    Otherwise, $5\in C(u_1)\cup C(u_4)$.
    It follows that $H$ contains a $(2, 3)_{(u_2, u_4)}$-path, $W_3 = [3, 6]$, and then $2\in B_1$, $R_1\ne \emptyset$.
    Then, by \ref{1331switch}, we are done.
\item
    $1\in S_u$. One sees that if $4\not\in S_u\setminus C(u_3)$,
    then for each $j\in C_{13}$, $H$ contains a $(j, i)_{(u_4, u_i)}$-path for some $i\in [1, 3]$,
    and that if $5\not\in C(u_1)\cup C(u_4)$, then it follows from \ref{315234} that $5/3\in C(u_1)$.
    \begin{itemize}
    \parskip=0pt
    \item
         $4\not\in S_4\setminus C(u_3)$.
         If $W_4 = \{4, 6, 2\}$, one sees that $C_{13}\subseteq C(u_2)$, $3, 5\in C(u_2)$, and $5/3\in C(u_1)$,
         then $\sum_{x\in N(u)\setminus \{v\}}d(x)\ge 2\Delta + 6 + |\{2, 2, 4, 3, 1, 5/3\}| + t_0\ge 2\Delta + 12 + t_1 + c_{13}\ge 2\Delta + 12 + 1 + 1 = 2\Delta + 14$.
         Otherwise, $w_4 = 4$, $c_{13}\ge 2$, and then $1/3\in C(u_4)$.
         It follows that $W_1 = \{1, 2, 3/5\}$, $W_3 = [3, 6]$, and $1, 5\in C(u_2)$.
         Since $\sum_{x\in N(u)\setminus \{v\}}d(x)\ge 2\Delta + 12 + |\{3/5\}| = 2\Delta + 13$,
         $W_4 = \{4, 6, 2, 3\}$.
         It follows from Corollary~\ref{auv2}(5.1) that $H$ contains a $(2, 7)_{(u_2, u_4)}$-path.
         Since $u u_3\to 7$ and $(u u_1, u u_3)\to (7, 1)$ reduce to an acyclic edge $(\Delta + 5)$-coloring for $H$, by \ref{1331switch}, we are done.
    \item
         $4\in C(u_1)\cup C(u_2)$.
         Then $\sum_{x\in N(u)\setminus \{v\}}d(x)\ge 2\Delta + 12 + |\{4\}| = 2\Delta + 13$.
         It follows that $t_0 = t_1 = 1$, mult$_{S_u}(i) = 1$, $i = 1, 3, 5, 7$,
         mult$_{S_u}(j) = 2$, $j = 2, 4, 6$,
         and from Lemma~\ref{135766}(4) that $6\in C(u_3)$ and $H$ contains a $(6, T_3)_{(u_3, u_4)}$-path.
         If $5\not\in C(u_4)$, then $5\in C(u_2)$, $3\in C(u_2)\cup C(u_4)$, and $5/3\in C(u_1)$, a contradiction. Otherwise, $W_4 = \{4, 6, 2, 5\}$, $W_3 = [3, 6]$, and $H$ contains a $(2, 3)_{(u_2, u_4)}$-path.
Then by \ref{1331switch}, we are done.
    \end{itemize}
\end{itemize}

{Case 2.3.} $T_3\setminus C(u_2)\ne \emptyset$ and $T_1\setminus C(u_2)\ne \emptyset$.
Then $4\in C(u_1)\cap C(u_3)$, $H$ contains a $(4, T_1\cup T_3)_{(u_4, u)}$-path,
and for each $j\in \{1, 3\}$, $j\in C(u_4)$, or $H$ contains a $(j, i)_{(u_4, u)}$-path.
When $1, 3\not\in S_u\setminus C(u_2)$,
it follows that $2\in C(u_4)$, $H$ contains a $(2, \{1, 3\})_{(u_2, u_4)}$-path,
then by \ref{1331switch}, we are done.
In the other case, $1/3\in S_u\setminus C(u_2)$.
One sees that $d(u_1) + d(u_3) + d(u_4)\ge 2t_{u v} + |\{1, 4\}| + |[3, 5]| + |\{4, 6\}| + |\{2, 2, 6, 1/3\}| = 2\Delta + 7$.
It follows that $d(u_i) = \Delta = 7$, $i = 1, 3, 4$, $d(u_2)\le 6$, $|\{1, 3\}\cap S_u\setminus C(u_2)|\le 1$,
and then $5, 6, 7\in C(u_2)$.
Further, $2\in C(u_4)$, $6\in C(u_3)$, and $H$ contains a $(2, \{1, 3\}\setminus (S_u\setminus C(u_2)))_{(u_2, u_4)}$-path.
\begin{itemize}
\parskip=0pt
\item
     $1\in C(u_3)\cup C(u_4)$.
     It follows that $W_3\cup W_4 = [3, 6]\cup \{2, 4, 6\}\cup \{1\}$, $W_1 = \{1, 2, 4\}$, and $H$ contains a $(2, 3)_{(u_2, u_4)}$-path.
     Further, it follows from Corollary~\ref{auv2}(5.1) that $H$ contains a $(2, 7)_{(u_2, u_4)}$-path.
     Since $u u_1\to 7$ and $(u u_1, u u_3)\to (3, 7)$ reduces to an acyclic edge $(\Delta + 5)$-coloring for $H$, by \ref{1331switch}, we are done.
\item
     $1\not\in C(u_3)\cup C(u_4)$.
     Then $H$ contains a $(1, 2)_{(u_2, u_4)}$-path.
     It follows from \ref{1331switch} that $3\in C(u_1)$ or $H$ contains a $(3, 2/4)_{(u_1, u)}$-path.
     If $2\in C(u_3)$, since $W_1 = \{1, 3, 4\}$, $W_4 = \{4, 6, 2\}$, and $H$ contains a $(3, 6)_{(u_1, u_3)}$-path, then $(v u_4, u v)\overset{\divideontimes}\to (T_4, 6)$.
     Otherwise, $2\not\in C(u_3)$.
     It follows from Corollary~\ref{auv2}(5.1) that $H$ contains a $(2, 7)_{(u_2, u_4)}$-path.
     Since $u u_3\to 7$ and $(u u_1, u u_3)\to (7, 1)$ reduce to an acyclic edge $(\Delta + 5)$-coloring for $H$, by \ref{1331switch}, we are done.
\end{itemize}

Hence, $T_1\setminus C(u_4)\ne \emptyset$, $T_1\subseteq C(u_2)$, and $T_3\setminus C(u_2)\ne \emptyset$.
\end{proof}

It follows from Corollary~\ref{auv2}(5.1) that $1, 3\in S_u$,
from Corollary~\ref{auv2}(5.2) and (3) that $i\in B_1\cup B_3$, mult$_{S_u}(i)\ge 2$, $i\in \{2, 4\}$,
and $4\in C(u_3)$, $H$ contains a $(4, T_3)_{(u_3, u_4)}$-path,
$2\in C(u_1)$, $H$ contains a $(2, T_1)_{(u_1, u_2)}$-path.
Hence, $\sum_{x\in N(u)\setminus \{v\}}d(x)\ge 6 + \sum_{i\in [1, 4]}$ mult$_{S_u}(i) + |\{6, 6, 5, 7\}| + 2t_{u v} + t_0 = 6 + 2 + 4 + 4 + 2\Delta - 4 + t_0 = 2\Delta + 12 + t_0$.
One sees that $2\le $ mult$_{S_u}(1) + $ mult$_{S_u}(3) \le 3$.
One sees that $1\in C(u_2)$ or $H$ contains a $(1, 3/4)_{(u_2, u)}$-path,
and that $3\in C(u_4)$ or $H$ contains a $(3, 1/2)_{(u_4, u)}$-path.
We distinguish the following two subcases on whether mult$_{S_u}(1) + $ mult$_{S_u}(3) = 2$ or not.

{\bf Case 1.} mult$_{S_u}(1) + $ mult$_{S_u}(3) = 2$.

{Case 1.1.}  $1\in C(u_3)$.
Then $3\in C(u_2)$ and $2\in C(u_4)$.
If $2\not\in C(u_3)$, since $(u u_1, u u_2, u u_3, u u_4)\to (T_1, 1, 2, 3)$ reduces to acyclic edge $(\Delta + 5)$-coloring for $H$,
then by \ref{1331recoloring}, we are done.
Otherwise, $2\in C(u_3)$, mult$_{S_u}(5) = 1$ and by Lemma~\ref{135766}(4), $6\in C(u_3)$.
Hence, $W_3 = [1, 6]$, $W_4 = \{2, 4, 6\}$, and by \ref{135Cu42T34B1B3}(1), $5\in C(u_2)$ and $H$ contains a $(2, 5)_{(u_2, u_4)}$-path.
Then $(u u_1, u v)\overset{\divideontimes}\to (5, T_3)$\footnote{Or, contradiction to \ref{315234}. }.

{Case 1.2.}  $1\in C(u_3)$. Then $4\in C(u_2)$ and $H$ contains a $(4, 1)_{(u_2, u_4)}$-path.
It follows from \ref{1331switch} that $3\in C(u_1)$ or $H$ contains a $(3, 2/4)_{(u_1, u)}$-path.
Hence, $3\in C(u_1)\cup C(u_2)$.
If $3\in C(u_4)$, it follows that $4\in C(u_1)$ and $H$ contains a $(4, 3)_{(u_1, u_4)}$-path,
then $(u u_1, u u_2, u u_3, u v)\overset{\divideontimes}\to (T_1\cap T_4, 3, 1, T_3\cap T_2)$.

Otherwise, $3\in C(u_1)$.
If $2\not\in C(u_4)$, then $(u u_1, u u_2, u u_4, u v)\to (T_1\cap T_4, 1, 2, T_3)$.
If $2\not\in C(u_3)$, since $(u u_1, u u_2, u u_3, u u_4)\to (T_1\cap T_4, 1, 2, 3)$ reduces to acyclic edge $(\Delta + 5)$-coloring for $H$,
then by \ref{1331recoloring}, we are done.
Otherwise, $2\in C(u_3)\cap C(u_4)$ and $4\not\in C(u_1)$.
Since $(u u_1, u u_2, u u_3, u u_4)\to (4, 1, T_3, 3)$ reduces to acyclic edge $(\Delta + 5)$-coloring for $H$,
then by \ref{1331recoloring}, we are done.

{Case 1.3.}  $1\in C(u_2)$.
One sees that $3\in C(u_4)$ or $H$ contains a $(2, 3)_{(u_2, u_4)}$-path.
When $3\in C(u_2)$, it follows that $2\in C(u_4)$, $H$ contains a $(2, 3)_{(u_2, u_4)}$-path,
and from \ref{1331switch} that $2\in C(u_3)$ and $H$ contains a $(1, 2)_{(u_2, u_3)}$-path.
Then $(u u_1, u u_4, u u_3, u v)\overset{\divideontimes}\to (3, 1, T_3\cap T_2, T_1\cap T_4)$
\footnote{In fact, $T_3\cap C(u_2) = \emptyset$, $T_1\cap C(u_4) = \emptyset$,
i.e., $T_3\cap T_2 = T_3$, $T_1\cap T_4 = T_1$. }.
In the other case, $3\in C(u_4)$.
If $H$ contains no $(4, 3)_{(u_2, u_4)}$-path,
then $(u u_1, u u_2, u u_3, u v)\overset{\divideontimes}\to (T_1\cap T_4, 3, 1, T_3\cap T_2)$.
Otherwise, $H$ contains a $(4, 3)_{(u_2, u_4)}$-path.
It follows from \ref{1331switch} that $2\in C(u_3)$ and $H$ contains a $(1, 2)_{(u_2, u_3)}$-path.
If $2\not\in C(u_4)$,
then $(u u_1, u u_2, u u_3, u u_4, u v)\overset{\divideontimes}\to (T_1\cap T_4, 3, 1, 2, T_3\cap T_2)$.
Otherwise, $2\in C(u_4)$, mult$_{S_u}(5) = 1$ and by Lemma~\ref{135766}(4), $6\in C(u_3)$.
Hence, $W_3 = [2, 6]$, $W_4 = \{2, 3, 4, 6\}$, and by \ref{135Cu42T34B1B3}(1), $5\in C(u_2)$ and $H$ contains a $(2, 5)_{(u_2, u_4)}$-path.
Then $(u u_1, u v)\overset{\divideontimes}\to (5, T_3)$\footnote{Or, contradiction to \ref{315234}. }.

{\bf Case 2.} mult$_{S_u}(1) + $ mult$_{S_u}(3) = 3$.
Then mult$_{S_u}(i) = 2$, $i = 2, 4, 6$, mult$_{S_u}(j) = 1$, $j = 5, 7$,
and $t_0 = 0$, i.e., $T_3\cap C(u_2) = \emptyset$, $T_1\cap C(u_4) = \emptyset$.
If $7\not\in C(u_4)\cup C(u_3)$ and $H$ contains no $(7, 1/2)_{(u_4, u)}$-path,
then $(u u_4, u u_3, u v)\overset{\divideontimes}\to (7, T_3, T_1)$.
Otherwise, $7\in C(u_3)\cup C(u_4)$, or $H$ contains a $(7, 1/2)_{(u_4, u)}$-path.
We consider the following three scenarios due to $5\in C(u_i)$, $i = 1, 2, 4$, respectively.

{Case 2.1.} $5\in C(u_2)$.
It follows from \ref{135Cu42T34B1B3}(1) that $2\in C(u_4)$, $H$ contains a $(2, 5)_{(u_2, u_4)}$-path,
from \ref{315234} that $3\in C(u_1)$,
and from \ref{131notCu3356Cu1} that $1\in C(u_4)$ or $H$ contains a $(1, 4)_{(u_3, u_4)}$-path.
If $3\not\in C(u_2)\cup C(u_4)$, then $(u u_1, u u_3, u u_4)\to (5, T_3, 3)$ reduces the proof to (2.4.2.).
Otherwise, $3\in C(u_2)\cup C(u_4)$ and mult$_{S_u}(1) = 1$.
Recall that $1\in C(u_2)$ or $H$ contains a $(1, 3/4)_{(u_2, u)}$-path.
It follows that $1\in C(u_3)$ and $3\in C(u_2)$.
One sees that $(u u_1, u u_2, u u_3, u u_4)\to (T_1, 1, 2, 3)$ reduces to acyclic edge $(\Delta + 5)$-coloring for $H$,
then by \ref{1331recoloring}, we are done.

{Case 2.2.} $5\in C(u_4)$.
It follows from \ref{315234} and \ref{135notCu1}(1) that $H$ contains a $(5, \alpha)_{(u_1, u_\alpha)}$-path and a $(5, \beta)_{(u_2, u_\beta)}$-path, where $\{\alpha, \beta\} = \{3, 4\}$, and $\alpha\in C(u_1)$, $\beta\in C(u_2)$.

(2.2.1.) $\alpha = 4$, $\beta = 3$.
It follows from \ref{131Cu4232T1}(2) that $1\in C(u_4)$ or $H$ contains a $(1, 2/3)_{(u_4, u)}$-path.
If $1\not\in C(u_3)$ and $H$ contains no $(1, 4/5)_{(u_3, u)}$-path,
then $(u u_1, u u_2, u u_3, u v)\overset{\divideontimes}\to (T_1, 5, 1, T_3)$.
Otherwise, $1\in C(u_3)$ or $H$ contains a $(1, 4/5)_{(u_3, u)}$-path.

When $1\not\in C(u_3)\cup C(u_4)$,
it follows that $2\in C(u_4)$, $H$ contains a $(1, 2)_{(u_2, u_4)}$-path and a $(1, 5)_{(u_2, v_1)}$-path,
then $(u u_1, u u_2, u u_3, u u_4, u v)\overset{\divideontimes}\to (T_1, 5, 2, 1, T_3)$.
In the other case, $1\in C(u_3)\cup C(u_4)$.

One sees clearly that $3\not\in C(u_3)\cup C(u_4)$.
If $2\not\in C(u_3)$, then $(u u_1, u u_2, u u_3, u u_4)\to (T_1, 5, 2, 3)$ reduces the proof to (2.4.2.).
Otherwise, $2\in C(u_3)$ and then $1\in C(u_4)$, $3\in C(u_1)$.
Thus, $1\not\in C(u_2)\cup C(u_3)$ and $H$ contains a $(1, 4)_{(u_2, u_4)}$-path and a $(1, 4)_{(u_3, u_4)}$-path, a contradiction.

(2.2.2.) $\alpha = 3$, $\beta = 4$.
It follows from \ref{131notCu3356Cu1} that $1\in C(u_3)$ or $H$ contains a $(1, 2/4)_{(u_3, u)}$-path.
If $1\not\in C(u_4)$, and $H$ contains no $(1, 3/5)_{(u_4, u)}$-path,
then $(u u_1, u u_2, u u_4, u v)\overset{\divideontimes}\to (T_1, 5, 1, T_3)$.
Otherwise, $1\in C(u_4)$ or $H$ contains a $(1, 3/5)_{(u_4, u)}$-path.

When $1\not\in C(u_3)\cup C(u_4)$,
it follows that $2\in C(u_3)$, $H$ contains a $(1, 2)_{(u_2, u_3)}$-path and a $(1, 5)_{(u_2, u_4)}$-path,
then $(u u_1, u u_2, u u_3, u u_4, u v)\overset{\divideontimes}\to (T_1, 5, 1, 2, T_3)$.
In the other case, $1\in C(u_3)\cup C(u_4)$.
One sees clearly that $3\not\in C(u_3)\cup C(u_4)$.
If $H$ contains no $(2, 3)_{(u_2, u_4)}$-path,
then $(u u_1, u u_3, u u_4)\to (5, T_3, 3)$ reduces the reduces the proof to (2.4.2.).
Otherwise, $2\in C(u_4)$, $3\in C(u_2)$, and  mult$_{S_u}(1) = 1$.
Hence, $1\in C(u_4)$.
Thus, $H$ contains a $(1, 4)_{(u_2, u_4)}$-path and a $(1, 4)_{(u_3, u_4)}$-path, a contradiction.

{Case 2.3.} $5\in C(u_1)$. One sees that $1\in C(u_4)$ and $H$ contains a $(1, 5)_{(u_1, u_4)}$-path.
When $3\not\in C(u_1)\cup C(u_4)$,
one sees that $2\in C(u_4)$ and $H$ contains a $(2, 3)_{(u_2, u_4)}$-path,
then $(u u_1, u u_3, u u_4)\to (3, T_3, 5)$ reduces the proof to (2.4.1.).
In the other case, $3\in C(u_1)\cup C(u_4)$.

(2.3.1.) $3\in C(u_4)$.
Then $1\not\in C(u_3)$.
If $1\not\in C(u_2)$, one sees that $4\in C(u_2)$,
then $u u_3\to \alpha$,
and $(u u_1, u u_2, u u_4, u v)\overset{\divideontimes}\to (4, 1, 5, T_3\setminus \{\alpha\})$.
Otherwise, $1\in C(u_2)$ and then $3\not\in C(u_1)\cup C(u_2)$.
If $2\not\in C(u_4)$, then $(u u_1, u u_2, u u_3, u u_4, u v)\overset{\divideontimes}\to (T_1, 3, 1, 2, T_3)$.
Otherwise, $2\in C(u_4)$ and $(u u_1, u u_2, u u_3, u u_4, u v)\overset{\divideontimes}\to (T_1, 3, 2, 5, T_3)$.

(2.3.2.) $3\not\in C(u_4)$ and $3\in C(u_1)$.
When $1\not\in C(u_3)\cup C(u_4)$,
one sees that $4\in C(u_2)$, then $u u_3\to \alpha$,
and $(u u_1, u u_2, u u_4, u v)\overset{\divideontimes}\to (4, 1, 5, T_3\setminus \{\alpha\})$.
In the other case, $1\in C(u_3)\cup C(u_4)$ and  mult$_{S_u}(3) = 1$.
If $1\in C(u_3)$, one sees that $2\in C(u_4)$, $2\in B_1$ and $R_1\ne \emptyset$,
then $(u u_1, u u_2, u u_3, u u_4, u v)\overset{\divideontimes}\\to (4, 1, 2, 3, T_1\cap R_1)$.
Otherwise, $1\not\in C(u_3)$ and $1\in C(u_2)$.

If $2\in C(u_3)$, one sees that $2\not\in C(u_4)$, $4, 7\not\in C(u_1)$, and then $4\in B_3$,
it follows that $7\in C(u_4)$,
then $(u u_1, u u_2, u u_4, u v)\overset{\divideontimes}\to (4, 7, 2, T_3)$.
Otherwise, $2\in C(u_4)\setminus C(u_3)$ and then $2\in B_1$, $(T_1\cup C_{13})\cap R_1\ne \emptyset$.
If $H$ contains no $(7, j)_{(v_1, v_7)}$-path for some $j\in (T_1\cup C_{13})\cap R_1$,
then $(v v_1, v u_3, u v)\overset{\divideontimes}\to (j, 1, T_3)$.
Otherwise, it follows from Lemma~\ref{auvge2}(1.2) that $7\in C(v_1)$,
and $H$ contains a $(7, (T_1\cup C_{13}))_{(v_1, v_7)}$-path.
One sees that $H$ contains no $(j, i)_{(u_4, v)}$-path for each $j\in (T_1\cup C_{13})\cap R_1$.
It follows from Corollary~\ref{auv2}(4) that $6\in B_1\cup B_3$.
One sees that if $C(v_3) = \{3, 6\}\cup T_{u v}$, then $(v v_3, u v)\overset{\divideontimes}\to (4, T_1)$,
and that if $4\in C(v_3)$, then from Corollary~\ref{auv2d(u)5}(6), $(T_3\cup C_{13})\cap R_3\ne \emptyset$.
Note that $\{4, 6\}\setminus C(u_1)\ne \emptyset$ and $\{4, 6\}\cap B_3\ne \emptyset$.
Hence, $(T_3\cup C_{13})\cap R_3\ne \emptyset$ and it follows from Lemma~\ref{auvge2}(1.2) that $7\in C(v_3)$, and $H$ contains a $(7, (T_3\cup C_{13})\cap R_3)_{(v_3, v_7)}$-path.
One sees that if $2\in C(v_3)$, since $2, 3, 4/6, 7\in C(v_3)$ and $c_{13}\le 1$,
then $R_3\ge 2$, and $R_3\cap T_3\ne \emptyset$.
\begin{itemize}
\parskip=0pt
\item
     $7\in C(u_1)$. Then $(u u_3, u v)\overset{\divideontimes}\to (7, T_1\cap R_1)$.
\item
    $7\in C(u_4)$.
    Then $(u u_1, u u_2, u u_3, u u_4)\overset{\divideontimes}\to (T_1, 7, 1, 3, (2\cup T_3)\setminus C(v_3))$.
\item
    $7\in C(u_3)$. One sees that $C_{13} = \emptyset$ and $R_1\subseteq T_1$.
    Then $(u u_1, u u_3, u u_4, u v)\overset{\divideontimes}\to (7, 1, 3, T_1\cap R_1)$.
\item
    $7\in C(u_2)$.
    Then $H$ contains a $(2, 7)_{(u_2, u_4)}$-path.
    One sees that $u u_1\to 7$ or $u u_3\to 7$ reduces to an acyclic edge $(\Delta + 5)$-coloring for $H$.
    Hence, $T_1\cup T_3\subseteq C(v_7)$ and then $T_1\subseteq C(v_1)$, $T_3\subseteq C(v_3)$.
    Thus, $H$ contains a $(7, C_{13})_{(v_1, v_7)}$-path and a $(7, C_{13})_{(v_3, v_7)}$-path, a contradiction.
\end{itemize}

This finishes the inductive step for the case (2.4.3.).

{Case 2.5.} $\{i_1, i_2\} = \{1, 2\}$ and assume w.l.o.g. $c(v u_3) = 5$.
Assume that $5\not\in C(u_4)$.
Let $u u_4\to 5$.
If $5\not\in C(u_1)$, $H$ contains no $(5, 2)_{(u_2, u_4)}$-path,
and $H$ contains no $(3, j)_{(u_1, u_3)}$-path for some $j\in T_1$,
then $u u_1\to j$.
If $H$ contains no $(5, 1)_{(u_4, u_1)}$-path, $4\not\in C(u_2)$,
and $H$ contains no $(4, i)_{(u_2, u_i)}$-path for each $i\in \{1, 3\}$,
then $u u_2\to 4$.
If $H$ contains no $(5, 2)_{(u_4, u_2)}$-path, $4\not\in C(u_1)$,
and $H$ contains no $(4, i)_{(u_1, u_i)}$-path for each $i\in [2, 3]$,
then $u u_1\to 4$.
Assume that $H$ contains no $(5, i)_{(u_4, u_i)}$-path for each $i\in [1, 2]$.
If $4\not\in B_1\cup B_2$, then $u v\overset{\divideontimes}\to 4$.
If for some $i\in [1, 2]$, $4\not\in C(u_i)$ and $H$ contains no $(3, 4)_{(u_i, u_3)}$-path,
then $uu_i\to 4$.
These reduce the proof to Case (2.2.1.) or Case (2.4.1.).
Hence, we proceed with the following proposition, or otherwise we are done:
\begin{proposition}
\label{125u4u1oru4u2}
Assume that $5\not\in C(u_4)$. Then
\begin{itemize}
\parskip=0pt
\item[{\rm (1)}]
	if $5\not\in C(u_1)$, $H$ contains no $(5, 2)_{(u_2, u_4)}$-path,
    then $3\in C(u_1)$, $H$ contains a $(3, T_1)_{(u_1, u_3)}$-path;
\item[{\rm (2)}]
	if $5\not\in C(u_2)$, $H$ contains no $(5, 1)_{(u_1, u_4)}$-path,
    then $3\in C(u_2)$, $H$ contains a $(3, T_2)_{(u_2, u_3)}$-path;
\item[{\rm (3)}]
	if $H$ contains no $(5, 1)_{(u_4, u_1)}$-path,
    then $4\in C(u_2)$ or $H$ contains a $(4, 1/3)_{(u_2, u)}$-path;
\item[{\rm (4)}]
	if $H$ contains no $(5, 2)_{(u_4, u_2)}$-path,
    then $4\in C(u_1)$ or $H$ contains a $(4, 2/3)_{(u_1, u)}$-path;
\item[{\rm (5)}]
	if $H$ contains no $(5, i)_{(u_4, u_i)}$-path for each $i\in [1, 2]$,
    then $4\in B_1\cup B_2$, mult$_{S_u}(4)\ge 2$, and for each $i\in[1, 2]$, $4\in C(u_i)$,
    or $H$ contains a $(3, 4)_{(u_i, u_3)}$-path, $3\in C(u_i)$, $4\in C(u_3)$.
\end{itemize}
\end{proposition}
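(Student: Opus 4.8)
The plan is to prove \ref{125u4u1oru4u2} exactly as the ``recolour‑or‑done'' dichotomy that drives this whole subsection: each conjunct in the statement is the contrapositive of a recolouring move described in the paragraph immediately preceding it, so the proposition simply records the situations in which none of those moves succeeds. The common first step is the swap $uu_4\to 5$, which is colour‑legal because $5\notin C(u_4)$ by hypothesis. The one point that genuinely needs checking is that this swap cannot create a $(5,3)$‑bichromatic cycle through $uu_4$ and $uu_3$: the unique $5$‑edge incident to $u_3$ is $vu_3$ (since $c(vu_3)=5$ and the colouring is proper), so any such cycle would have to enter $u_3$ along $vu_3$, hence pass through $v$ along an edge coloured $3$; but $3\notin C(v)$, so no $3$‑edge is incident to $v$ and the cycle is impossible. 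Consequently the only obstructions to $uu_4\to 5$ are $(5,1)_{(u_4,u_1)}$‑ and $(5,2)_{(u_4,u_2)}$‑paths, which is precisely why every hypothesis in the statement restricts attention to $i\in[1,2]$.

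For parts (1) and (2) I would, after $uu_4\to 5$, attempt the further move $uu_i\to j$ with $j\in T_i$. If some $j$ avoids the obstructing $(3,j)_{(u_i,u_3)}$‑path, the move is valid and produces an acyclic edge $(\Delta+5)$‑colouring in which $A_{uv}=(\{1,2\}\setminus\{i\})\cup\{5\}$ and $c(vu_3)=5\in A_{uv}$; after relabelling the colours this is exactly the already‑settled Case (2.2.1) or (2.4.1), so we are done by induction. Hence, if we are not done, $3\in C(u_i)$ and $H$ carries a $(3,j)_{(u_i,u_3)}$‑path for every $j\in T_i$, i.e.\ the asserted $(3,T_i)_{(u_i,u_3)}$‑path. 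Parts (3) and (4) run the same way, now attempting $uu_2\to 4$ (resp.\ $uu_1\to 4$): the relevant no‑$(5,1)_{(u_4,u_1)}$‑path (resp.\ no‑$(5,2)_{(u_4,u_2)}$‑path) hypothesis keeps us in the correct branch, the move is available precisely when $4\notin C(u_2)$ and no $(4,i)_{(u_2,u_i)}$‑path ($i\in\{1,3\}$) blocks it, and its success again reduces to an earlier case; negating the blocking conditions yields the stated alternative $4\in C(u_2)$ or a $(4,1/3)_{(u_2,u)}$‑path.

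Part (5) is the structural heart. Under its hypothesis (no $(5,i)_{(u_4,u_i)}$‑path for $i\in[1,2]$) the swap $uu_4\to 5$ itself already yields a valid acyclic edge $(\Delta+5)$‑colouring of $H$, in which $4\notin C(u)$; then $uv\to 4$ is colour‑legal (as $4\notin C(v)$ either), so applying \ref{prop3001} and Corollary~\ref{auv2} to this recoloured colouring forces $4$ to lie in the relevant join of $B$‑sets, the bookkeeping of Lemma~\ref{auvge2} identifying this join with $B_1\cup B_2$ and giving $\mathrm{mult}_{S_u}(4)\ge 2$; for otherwise $uv\overset{\divideontimes}\to 4$ would finish $G$ directly. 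Finally, for each $i\in[1,2]$ the move $uu_i\to 4$ is available and finishes unless $4\in C(u_i)$ or it is blocked by a $(3,4)_{(u_i,u_3)}$‑path, in which latter case properness together with Lemma~\ref{lemma06} forces $3\in C(u_i)$ and $4\in C(u_3)$, which is the remaining clause.

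The main obstacle I anticipate is not any single computation but the bookkeeping: for each recolouring one must certify that it is genuinely cycle‑free (the $(5,3)$‑cycle argument above being the delicate instance), track how $A_{uv}$, the path‑sets $B_i,T_i$, and the multiplicities in $S_u$ change under the swap, and confirm that every \emph{successful} move lands in a case already disposed of rather than looping back into Case 2.5. Keeping the colour relabelling consistent across these reductions is where the care is required.
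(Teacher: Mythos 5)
Your proposal is correct and follows essentially the same route as the paper: the proposition is exactly the record of the failed recolouring moves $uu_4\to 5$ followed by $uu_i\to j$ ($j\in T_i$), $uu_i\to 4$, and $uv\to 4$, each of whose success would reduce to Case (2.2.1)/(2.4.1) or finish $G$ outright, with part (5) then read off via \ref{prop3001} and the Corollary~\ref{auv2}-style bookkeeping. Your explicit verification that $uu_4\to 5$ cannot create a $(3,5)$-cycle (because the unique $5$-edge at $u_3$ is $vu_3$ and $3\notin C(v)$) is a detail the paper leaves implicit, but otherwise the two arguments coincide.
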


For each $j\in [6, 7]\setminus C(u_4)$, let $uu_4\to j$.
If $j\not\in C(u_1)$, $H$ contains no $(j, i)_{(u_4, u_i)}$-path for each $i\in [2, 3]$,
and $H$ contains no $(3, l)_{(u_1, u_3)}$-path for some $l\in T_1$,
then $u u_1\to l$.
If $H$ contains no $(j, i)_{(u_4, u_i)}$-path for each $i\in [2, 3]$,
$4\in C(u_1)$ and $H$ contains no $(4, l)_{(u_1, u_l)}$-path for each $l\in [2, 3]$,
then $u u_1\to 4$.
These reduce the proof to Case (2.2.2.) or Case (2.4.3.).
Hence, assume the following proposition, or otherwise we are done:
\begin{proposition}
\label{126notinu4}
For each $j\in [6, 7]\setminus C(u_4)$, then
\begin{itemize}
\parskip=0pt
\item[{\rm (1)}]
	if $j\not\in C(u_1)$ and $H$ contains no $(j, i)_{(u_i, u_4)}$-path for each $i\in [2, 3]$,
    then $3\in C(u_1)$ and $H$ contains a $(3, T_1)_{(u_1, u_3)}$-path;
\item[{\rm (2)}]
	if $j\not\in C(u_2)$ and $H$ contains no $(j, i)_{(u_i, u_4)}$-path for each $i\in \{1, 3\}$,
    then $3\in C(u_2)$ and $H$ contains a $(3, T_2)_{(u_2, u_3)}$-path;
\end{itemize}
\end{proposition}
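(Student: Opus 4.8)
The plan is to prove both parts of the statement by the same recolour-or-done scheme used throughout this subsection, building directly on the two recolourings set up in the paragraph immediately preceding it. The opening move is forced: for each $j\in[6,7]\setminus C(u_4)$ I would first recolour $uu_4\to j$. Since $C(u)=[1,4]$ in $H$ and $j\in[6,7]$, we have $j\notin C(u)$, and $j\notin C(u_4)$ by hypothesis, so no colour clash arises at $u$ or at $u_4$; the only threat is a bichromatic $(j,i)$-cycle through $uu_4$, which would force $i\in\{1,2,3\}$ together with $j\in C(u_i)$ and a $(j,i)_{(u_i,u_4)}$-path. In part~(1) the assumption $j\notin C(u_1)$ kills the case $i=1$, while the absence of a $(j,i)_{(u_i,u_4)}$-path for $i\in[2,3]$ kills $i=2,3$; hence $uu_4\to j$ is a legal recolouring that frees colour $4$ at $u$.

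With colour $4$ freed, I would then try to finish or to shrink $C(u)\cap C(v)$. If the asserted conclusion fails, then by definition there is some $l\in T_1$ admitting no $(3,l)_{(u_1,u_3)}$-path, and since $l\in T_1=T_{uv}\setminus C(u_1)$ the further recolouring $uu_1\to l$ is available. As the preceding paragraph records, the coloring it produces matches Case~(2.2.2.) or Case~(2.4.3.) after relabelling the colours $\{l,2,3,j\}$ now present at $u$ against $C(v)=\{1,2\}\cup[5,7]$; both of those cases have already been resolved, so $G$ admits an acyclic edge $(\Delta+5)$-colouring and we are done. Consequently, in the surviving situation $H$ must contain a $(3,l)_{(u_1,u_3)}$-path for \emph{every} $l\in T_1$, i.e.\ the full $(3,T_1)_{(u_1,u_3)}$-path. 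As $T_1\ne\emptyset$ in the current configuration and each such path must leave $u_1$ along an edge coloured $3$, this forces $3\in C(u_1)$, which is exactly the conclusion of part~(1).

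Part~(2) I would obtain verbatim by interchanging $u_1$ and $u_2$: the hypotheses $j\notin C(u_2)$ and the absence of $(j,i)_{(u_i,u_4)}$-paths for $i\in\{1,3\}$ again legitimise $uu_4\to j$, and a bad $l\in T_2$ would let $uu_2\to l$ reduce to Case~(2.2.2.) or Case~(2.4.3.); so $3\in C(u_2)$ and the $(3,T_2)_{(u_2,u_3)}$-path must be present.

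The routine cycle checks for the two recolourings are not where the difficulty lies. The real obstacle is confirming that after $uu_4\to j$ and $uu_1\to l$ the configuration genuinely coincides with one of the earlier cases: the new agreement set is $C(u)\cap C(v)=\{2,j\}$ with $j\in\{6,7\}$, and one must verify, under a colour relabelling, that this matches the $\{i_1,i_2\}$ pattern and the incidence data $T_{uv},S_u$ of Case~(2.2.2.) or Case~(2.4.3.). This bookkeeping is precisely what the paragraph before the statement asserts, so in the write-up I would cite that reduction rather than reprove it.
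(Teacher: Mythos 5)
Your argument is essentially identical to the paper's own justification, which is precisely the paragraph preceding the proposition: recolour $uu_4\to j$ (legal under the stated hypotheses), then recolour $uu_1\to l$ for any $l\in T_1$ lacking a $(3,l)_{(u_1,u_3)}$-path, and observe that the resulting colouring, with $C(u)\cap C(v)=\{2,j\}$, falls under the already-settled Case (2.2.2.) or Case (2.4.3.). Your contrapositive reading --- that the surviving configuration must therefore contain a $(3,T_1)_{(u_1,u_3)}$-path, which forces $3\in C(u_1)$ since $l\notin C(u_1)$ makes every such path leave $u_1$ on an edge coloured $3$ --- is exactly how the paper deploys this reduction.
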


For each $j\in [6, 7]\setminus C(u_3)$, let $u u_3\to j$.
If $j\not\in C(u_1)$, $H$ contains no $(j, i)_{(u_3, u_i)}$-path for each $i\in \{2, 4\}$,
and $H$ contains no $(4, l)_{(u_1, u_4)}$-path for some $l\in T_1$,
then $u u_1\to l$ reduces the proof to Case (2.4.).
Hence, we proceed with the following proposition, or otherwise we are done:
\begin{proposition}
\label{126notinu3}
For each $j\in [6, 7]\setminus C(u_3)$, then
\begin{itemize}
\parskip=0pt
\item[{\rm (1)}]
	if $j\not\in C(u_1)$ and $H$ contains no $(j, i)_{(u_i, u_3)}$-path for each $i\in \{2, 4\}$,
    then $4\in C(u_1)$ and $H$ contains a $(4, T_1)_{(u_1, u_4)}$-path;
\item[{\rm (2)}]
	if $j\not\in C(u_2)$ and $H$ contains no $(j, i)_{(u_i, u_3)}$-path for each $i\in \{1, 4\}$,
    then $4\in C(u_2)$ and $H$ contains a $(4, T_2)_{(u_2, u_4)}$-path;
\end{itemize}
\end{proposition}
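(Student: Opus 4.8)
The plan is to prove both parts by the recolour-and-reduce technique used throughout the present Case~(2.5), exploiting the fact that the two hypotheses of part~(1) are exactly what is needed to recolour the triangle edge $uu_3$. First I would observe that, since $j\in[6,7]\setminus C(u_3)$ and $j\notin C(u)$ in $H$, setting $uu_3\to j$ gives a proper edge colouring; it can create a bichromatic cycle only through the edge $uu_3$, necessarily of type $(j,i)$ with $i\in\{1,2,4\}$ (the colours remaining at $u$ after $3$ is released). The assumption $j\notin C(u_1)$ forbids a $(j,1)$-cycle, because such a cycle would force a $j$-coloured edge at $u_1$, while the hypothesis that $H$ contains no $(j,i)_{(u_i,u_3)}$-path for $i\in\{2,4\}$ forbids the $(j,2)$- and $(j,4)$-cycles (these are exactly the $u$-to-$u_3$ connections in the corresponding two-colour subgraphs); since colour $3$ is now absent at $u$, no $(j,3)$-issue arises. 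Hence $uu_3\to j$ yields a valid acyclic edge $(\Delta+5)$-colouring $c'$ of $H$.

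Next I would argue the contrapositive. Suppose the conclusion of~(1) fails, i.e. either $4\notin C(u_1)$ or some $l\in T_1$ admits no $(4,l)_{(u_1,u_4)}$-path; in both cases there is an $l\in T_1$ for which $H$ contains no $(4,l)_{(u_1,u_4)}$-path (if $4\notin C(u_1)$, then no such path exists for any $l$, as a $(4,l)_{(u_1,u_4)}$-path must begin at $u_1$ with a $4$-edge). On $c'$ I would then recolour $uu_1\to l$. This is proper because $l\in T_{uv}\setminus C(u_1)$, and the colours at $u$ besides $l$ are now $2,4,j$: the absence of the chosen $(4,l)_{(u_1,u_4)}$-path rules out the $(4,l)$-cycle, and $j\notin C(u_1)$ rules out the $(l,j)$-cycle through $uu_1$ and $uu_3$. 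The resulting colouring has $C(u)\cap C(v)=\{2,j\}$ and, after the obvious relabelling of colours, matches the configuration already treated in Case~(2.4); invoking that case yields either an acyclic edge $(\Delta+5)$-colouring of $G$ or a colouring of $H$ with $|C(u)\cap C(v)|\le 1$, so we are done. Consequently, whenever we are not done, $4\in C(u_1)$ and $H$ contains a $(4,T_1)_{(u_1,u_4)}$-path, which is precisely assertion~(1), the clause $4\in C(u_1)$ being forced by the path beginning at $u_1$ with a $4$-edge.

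Part~(2) would be proved by the fully symmetric argument, interchanging the roles of $u_1$ and $u_2$ and replacing the index set $\{2,4\}$ by $\{1,4\}$ throughout; the two recolourings become $uu_3\to j$ followed by $uu_2\to l$ with $l\in T_2$, and the hypothesis $j\notin C(u_2)$ now plays the role formerly played by $j\notin C(u_1)$.

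The step I expect to be the main obstacle is not the bookkeeping of the released colours but verifying cleanly that the second recolouring $uu_1\to l$ introduces no \emph{further} bichromatic cycle — in particular ruling out an $(l,2)$-cycle through $uu_1$ and $uu_2$, which the stated hypotheses do not address directly — and then confirming that the recoloured configuration genuinely lands in Case~(2.4) under the colour relabelling for the new pair $A_{uv}=\{2,j\}$. I would control the $(l,2)$-interaction by using that $l$ is a high colour in $[8,\Delta+5]$ introduced only at $uu_1$, together with the structural constraints already accumulated for Case~(2.5), and I would pin down the correct sub-case of~(2.4) by tracking where the edges $vu_3$ (colour $5$) and $vu_4$ sit after relabelling.
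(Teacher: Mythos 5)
Your argument is exactly the paper's: the paper disposes of this proposition with a single remark that, under the stated hypotheses, $uu_3\to j$ followed by $uu_1\to l$ (for an $l\in T_1$ admitting no $(4,l)_{(u_1,u_4)}$-path) reduces the proof to Case (2.4), so the contrapositive yields $4\in C(u_1)$ together with the $(4,T_1)_{(u_1,u_4)}$-paths, and your verification of which bichromatic cycles the hypotheses exclude matches this. The $(l,2)$-cycle through $uu_1$ and $uu_2$ that you single out as the main obstacle is likewise left implicit in the paper's one-line justification, so your proposal is faithful to, and no less complete than, the published argument.
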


Assume that $j\in T_4\cup [6, 7]\cup C(u_4)$ and $H$ contains no $(j, 3)_{(u_4, u_3)}$-path.
If $H$ contains no $(j, 2)_{(u_4, u_2)}$-path,
$4\in C(u_1)$ and $H$ contains no $(4, l)_{(u_1, u_l)}$-path for each $l\in [2, 3]$,
then $u u_1\to 4$ reduces the proof to Case (2.2.2.) or Case (2.4.3.),
or reduces to an acyclic edge $(\Delta + 5)$-coloring for $H$ such that $|C(u)\cap C(v)| = 1$.
If $H$ contains no $(j, i)_{(u_4, u_i)}$-path for each $i\in [1, 2]$,
then if $4\not\in B_1\cup B_2$, $uv\overset{\divideontimes}\to 4$;
and if for some $i\in [1, 2]$, $4\not\in C(u_i)$ and $H$ contains no $(3, 4)_{(u_i, u_3)}$-path,
then $uu_i\to 4$ and we are done.
Hence, assume the following proposition, or otherwise we are done:
\begin{proposition}
\label{12T4123}
Assume $j\in (T_4\cup [6, 7])\setminus C(u_4)$, $H$ contains no $(j, 3)_{(u_4, u_3)}$-path. Then
\begin{itemize}
\parskip=0pt
\item[{\rm (1)}]
	if $H$ contains no $(j, 2)_{(u_4, u_2)}$-path,
    then $4\in C(u_1)$ or $H$ contains a $(4, 2/3)_{(u_1, u)}$-path;
\item[{\rm (2)}]
	if $H$ contains no $(j, 1)_{(u_4, u_1)}$-path,
    then $4\in C(u_2)$ or $H$ contains a $(4, 1/3)_{(u_2, u)}$-path.
\item[{\rm (3)}]
    if $H$ contains no $(j, i)_{(u_4, u_i)}$-path for each $i\in [1, 2]$,
    then $4\in B_1\cup B_2$, mult$_{S_u}(4)\ge 2$,
    and for each $i\in[1, 2]$, $4\in C(u_i)$,
    or $H$ contains a $(3, 4)_{(u_i, u_3)}$-path, $3\in C(u_i)$, $4\in C(u_3)$.
\end{itemize}
\end{proposition}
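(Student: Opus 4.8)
The plan is to prove all three parts by the standard ``either we finish or the stated structure holds'' dichotomy, using one common recolouring move: set $uu_4\to j$. Since $j\in(T_4\cup[6,7])\setminus C(u_4)$ this is proper at $u_4$, and since $H$ contains no $(j,3)_{(u_4,u_3)}$-path (the standing hypothesis) it creates no bichromatic $\{j,3\}$-cycle through $u$. After this move $C(u)=\{1,2,3,j\}$ and $A_{uv}\supseteq\{1,2\}$; the only remaining obstructions to acyclicity are possible $\{j,1\}$- and $\{j,2\}$-cycles through $u_1$ and $u_2$, which is exactly what the three hypotheses switch off, one colour at a time. I would also record at the outset the invariance $C(u_4)\setminus\{4\}=C'(u_4)\setminus\{j\}$, so that $uu_4\to j$ leaves the multiset $S_u$ unchanged and the statement $\mathrm{mult}_{S_u}(4)\ge 2$ is meaningful in either colouring.

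For part (1) I add that there is no $(j,2)_{(u_4,u_2)}$-path, so after $uu_4\to j$ only a $\{j,1\}$-cycle can remain. Suppose the conclusion fails, i.e.\ $4\notin C(u_1)$ and $H$ has neither a $(4,2)_{(u_1,u_2)}$- nor a $(4,3)_{(u_1,u_3)}$-path. Then I follow $uu_4\to j$ by $uu_1\to 4$: this is proper at $u_1$ (as $4\notin C(u_1)$), it deletes colour $1$ from $u$ and hence destroys the only surviving $\{j,1\}$-obstruction, and it introduces no $\{4,\ell\}$-cycle since the $(4,2)$- and $(4,3)$-paths are absent and colour $4$ is no longer present at $u_4$ (so no $\{4,j\}$-cycle through $u_1,u_4$). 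The resulting colouring has $C(u)=\{2,3,4,j\}$, so $|C(u)\cap C(v)|$ drops: to $1$ when $j\in T_4$ (giving $a_{uv}=1$), and to Case~(2.2.2.) or~(2.4.3.) with $A_{uv}=\{2,j\}$ when $j\in[6,7]$. Either way we are done, contradicting the standing ``not done'' assumption, so the conclusion of (1) holds. Part (2) is the mirror image under $1\leftrightarrow 2$, $u_1\leftrightarrow u_2$, replacing ``no $(j,2)_{(u_4,u_2)}$-path'' by ``no $(j,1)_{(u_4,u_1)}$-path''.

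For part (3) both $(j,1)$- and $(j,2)$-paths are absent, so $uu_4\to j$ by itself yields a proper acyclic colouring $c'$ with $A'_{uv}\supseteq\{1,2\}$ and $4\in T'_{uv}$ (as $4\notin C(v)$). If $4\notin B_1\cup B_2$ then $uv\overset{\divideontimes}\to 4$ finishes; hence $4\in B_1\cup B_2$. Combining $4\in T'_{uv}$ with Corollary~\ref{auv2}(2) gives $4\in C(u_1)\cup C(u_2)$, and if $4\in T_1$ (resp.\ $T_2$) the $(4,a_1)_{(u_1,u_{a_1})}$-path of Corollary~\ref{auv2}(1.2) supplies a second occurrence of colour $4$, so $\mathrm{mult}_{S_u}(4)\ge 2$. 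Finally, for each $i\in[1,2]$, if $4\notin C(u_i)$ and $H$ has no $(3,4)_{(u_i,u_3)}$-path, then $uu_i\to 4$ (performed after $uu_4\to j$) is again proper acyclic and removes colour $i$ from $C(u)$, giving $a_{uv}\le 1$ and finishing; so otherwise $4\in C(u_i)$, or the $(3,4)_{(u_i,u_3)}$-path exists, which forces $3\in C(u_i)$ and $4\in C(u_3)$.

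The main obstacle is the cycle bookkeeping for the two-step recolourings, i.e.\ checking that after $uu_4\to j$ followed by $uu_1\to 4$ (or $uu_i\to4$) no new bichromatic cycle survives in any of the pairs $\{4,2\},\{4,3\},\{4,j\},\{j,1\}$. The observation that makes this clean is that $4\notin C(u_4)$ in $c'$, which rules out $\{4,j\}$-cycles through $u_4$, while the absent $(4,\cdot)$- and $(j,\cdot)$-paths handle the rest; the split $j\in T_4$ versus $j\in[6,7]$ must be tracked throughout, since it decides whether we land at $a_{uv}=1$ or in the already-treated Cases~(2.2.2.)/(2.4.3.). The multiplicity claim $\mathrm{mult}_{S_u}(4)\ge2$ then drops out of the invariance of $S_u$ under $uu_4\to j$ together with Corollaries~\ref{auv2}(1.2),(2).
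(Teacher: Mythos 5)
Your proof follows the paper's argument essentially verbatim: the same two-step recolouring $uu_4\to j$ followed by $uu_1\to 4$ (resp.\ $uu_i\to 4$, or $uv\overset{\divideontimes}\to 4$) and the same contrapositive reading of when these moves fail, and you in fact make explicit several steps the paper leaves implicit, such as the invariance of $S_u$ under $uu_4\to j$ and the exclusion of a $(4,j)$-cycle via $4\notin C'(u_4)$. The only point worth keeping in mind is that in part (3) the absence of a $(4,2)_{(u_1,u_2)}$-obstruction to $uu_1\to 4$ is not among your hypotheses but follows from $4\in B_2$ together with Lemma~\ref{lemma06} (the far endpoint of the maximal $(2,4)$-path issuing from $u$ is already occupied by $v$), which is precisely the content of Corollary~\ref{auv2}(1.2) that you cite for the multiplicity claim.
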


Assume $j\in T_3\cup [6, 7]\cup C(u_3)$ and $H$ contains no $(j, 4)_{(u_4, u_3)}$-path.
If $H$ contains no $(j, 2)_{(u_3, u_2)}$-path,
$3\in C(u_1)$ and $H$ contains no $(3, l)_{(u_1, u_l)}$-path for each $l\in \{2, 4\}$,
then $u u_1\to 3$ reduces the proof to Case (2.4.2.) or Case (2.4.3.),
or reduces to an acyclic edge $(\Delta + 5)$-coloring for $H$ such that $|C(u)\cap C(v)| = 1$.
If $H$ contains no $(j, i)_{(u_3, u_i)}$-path for each $i\in [1, 2]$,
then if $3\not\in B_1\cup B_2$, $u v\overset{\divideontimes}\to 3$;
and if for some $i\in [1, 2]$, $3\not\in C(u_i)$ and $H$ contains no $(3, 4)_{(u_i, u_4)}$-path,
then $u u_i\to 3$ and we are done.
Hence, we proceed with the following proposition, or otherwise we are done:
\begin{proposition}
\label{12T3124}
Assume $j\in (T_3\cup [6, 7])\setminus C(u_3)$, $H$ contains no $(j, 4)_{(u_4, u_3)}$-path. Then
\begin{itemize}
\parskip=0pt
\item[{\rm (1)}]
	if $H$ contains no $(j, 2)_{(u_3, u_2)}$-path,
    then $3\in C(u_1)$ or $H$ contains a $(3, 2/4)_{(u_1, u)}$-path;
\item[{\rm (2)}]
	if $H$ contains no $(j, 1)_{(u_3, u_1)}$-path,
    then $3\in C(u_2)$ or $H$ contains a $(3, 1/4)_{(u_2, u)}$-path.
\item[{\rm (3)}]
    if $H$ contains no $(j, i)_{(u_3, u_i)}$-path for each $i\in [1, 2]$,
    then $3\in B_1\cup B_2$, mult$_{S_u}(3)\ge 2$,
    and for each $i\in[1, 2]$, $3\in C(u_i)$,
    or $H$ contains a $(3, 4)_{(u_i, u_4)}$-path, $4\in C(u_i)$, $3\in C(u_4)$.
\end{itemize}
\end{proposition}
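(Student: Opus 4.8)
The plan is to establish Proposition~\ref{12T3124} by exactly the recolouring-and-Kempe-chain analysis just used for Proposition~\ref{12T4123}: the two statements are obtained from one another by interchanging the labelled edge $uu_3$ (colour~$3$) with $uu_4$ (colour~$4$) while fixing $uu_1,uu_2$ and the list $C(v)=\{1,2\}\cup[5,7]$, and since neither proof mentions $c(vu_3)$ or $c(vu_4)$, the same argument transfers once the roles of $3$ and $4$ are swapped. I would nonetheless run the argument directly so that the reductions invoked remain explicit.

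The single device behind all three parts is to liberate colour~$3$ at $u$. Since $j\in(T_3\cup[6,7])\setminus C(u_3)$ and $j\notin\{1,2,4\}$, the recolouring $uu_3\to j$ is proper; by Lemma~\ref{lemma06} the only bichromatic cycles it can create are $(j,l)$-cycles with $l\in\{1,2,4\}$, each witnessed by a $(j,l)_{(u_3,u_l)}$-path, and the standing hypothesis excludes the case $l=4$. With colour~$3$ now absent from $u$ and $3\notin C(v)$, I would try to finish or to reduce by reusing $3$, either colouring $uv\to3$ or recolouring some $uu_i\to3$ with $i\in[1,2]$.

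For part~(1) I argue the contrapositive: assuming $3\notin C(u_1)$ together with the absence of a $(3,2)_{(u_1,u_2)}$- and a $(3,4)_{(u_1,u_4)}$-path, and (the hypothesis of~(1)) of a $(j,2)_{(u_3,u_2)}$-path, I perform the compound switch $uu_3\to j$, $uu_1\to3$. The second step is proper because $3\notin C(u_1)$, it annihilates any latent $(j,1)_{(u_3,u_1)}$-cycle by moving $uu_1$ off colour~$1$, and Lemma~\ref{lemma06} leaves only the two excluded paths as possible obstructions. After the switch $C(u)\cap C(v)$ becomes $\{2\}$ when $j\in T_3$, so $a_{uv}=1$ and Lemma~\ref{auv1coloring} finishes, or $\{2,j\}$ with $j\in\{6,7\}$, which is Case~(2.4.2) or~(2.4.3); either way we are done, contradicting that we are not yet done. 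Parts~(2) and~(3) run identically with $u_2$, respectively both of $u_1,u_2$, replacing $u_1$: in~(3) the completion $uv\overset{\divideontimes}{\to}3$ succeeds unless $3\in B_1\cup B_2$, and each $uu_i\to3$ succeeds unless $3\in C(u_i)$ or a $(3,4)_{(u_i,u_4)}$-path blocks it, which forces mult$_{S_u}(3)\ge2$.

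The main obstacle is controlling the auxiliary chains that the compound recolourings can spawn, above all a possible $(3,j)_{(u_1,u_3)}$-chain and its analogues in~(2) and~(3): I expect that because $j\notin C(u_3)$ and $3\notin C(u_1)$ in the starting colouring, the $j$- and $3$-components cannot close into a bichromatic cycle after the two swaps, but making this rigorous needs a careful tracking of component endpoints through both recolourings via Lemma~\ref{lemma06}. The rest, namely confirming that each reduction genuinely lands in Case~(2.4) or the $a_{uv}=1$ regime and that the failure of every escape translates into precisely the stated path- and colour-membership conclusions, is the routine bookkeeping that the proposition is designed to encapsulate.
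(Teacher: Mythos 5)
Your proposal follows the paper's own argument for this proposition essentially verbatim: the paper likewise derives each part from the compound recolouring $uu_3\to j$ followed by $uu_1\to 3$, $uu_2\to 3$, or $uv\to 3$, reading off the stated colour-membership and path conclusions as the negations of the conditions under which that switch stays proper and acyclic and reduces to Case (2.4.2)/(2.4.3) or to $|C(u)\cap C(v)|=1$. The one obstacle you flag, a possible $(3,j)_{(u_1,u_3)}$-chain, dissolves in one line rather than requiring delicate tracking: $uu_3$ is the unique $3$-coloured edge at $u_3$ and is itself recoloured to $j$, so after the switch $u_3$ is incident to no $3$-coloured edge and therefore cannot lie on any $(3,j)$-bichromatic cycle.
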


By \ref{12T4123}(1--2),
we proceed with the following proposition, or otherwise we are done:
\begin{proposition}
\label{124notinu2u1}
\begin{itemize}
\parskip=0pt
\item[{\rm (1)}]
	If $4\not\in C(u_2)$ and $H$ contains no $(4, l)_{(u_2, u_l)}$-path for each $l\in \{1, 3\}$,
    then $6, 7\in C(u_3)\cup C(u_4)$,
    or $H$ contains a $(1, [6, 7]\setminus (C(u_3)\cup C(u_4)))_{(u_1, u_4)}$-path, $1\in C(u_4)$;
\item[{\rm (2)}]
	If $4\not\in C(u_1)$ and $H$ contains no $(4, l)_{(u_1, u_l)}$-path for each $l\in \{2, 3\}$,
    then $6, 7\in C(u_3)\cup C(u_4)$,
    or $H$ contains a $(2, [6, 7]\setminus (C(u_3)\cup C(u_4)))_{(u_2, u_4)}$-path, $2\in C(u_4)$.
\end{itemize}
\end{proposition}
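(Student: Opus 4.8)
The plan is to derive both parts directly from the contrapositives of Proposition~\ref{12T4123}, exploiting the fact that in the present configuration $C(u)=[1,4]$ and $6,7\in[5,7]=V_{uv}$, so the colors $6,7$ play exactly the role of the generic color $j$ in \ref{12T4123}. First I would record a small endpoint observation: for any $\eta\in[6,7]$, if $\eta\notin C(u_3)$ then $H$ contains no $(\eta,3)_{(u_4,u_3)}$-path, and if $\eta\notin C(u_4)$ then any $\{1,\eta\}$-bichromatic path with $u_1$ and $u_4$ as endpoints must terminate at $u_4$ through a $1$-colored edge, so that its existence forces $1\in C(u_4)$. Both facts rest on the same bookkeeping: a vertex carries at most one edge of each color, the only $3$-edge at $u_3$ is $uu_3$ and the only $1$-edge at $u_1$ is $uu_1$, and $u$ has no edge colored $\eta\in\{6,7\}$ since $C(u)=[1,4]$; hence a maximal $\{\eta,3\}$- (resp.\ $\{1,\eta\}$-) path cannot use the missing color to terminate at $u_3$ (resp.\ $u_4$).

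For part (1), I would assume the hypothesis in the form that negates the conclusion of \ref{12T4123}(2): namely $4\notin C(u_2)$ and $H$ contains neither a $(4,1)_{(u_2,u_1)}$- nor a $(4,3)_{(u_2,u_3)}$-path, so that ``$4\in C(u_2)$ or $H$ contains a $(4,1/3)_{(u_2,u)}$-path'' fails. If $6,7\in C(u_3)\cup C(u_4)$ the first alternative of the proposition already holds and I am done. Otherwise I pick any $\eta\in[6,7]\setminus(C(u_3)\cup C(u_4))$. Then $\eta\in(T_4\cup[6,7])\setminus C(u_4)$, and by the observation $H$ has no $(\eta,3)_{(u_4,u_3)}$-path, so \ref{12T4123}(2) applies to this $\eta$. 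Since its conclusion has just been assumed false, the remaining hypothesis of \ref{12T4123}(2) — the absence of a $(\eta,1)_{(u_4,u_1)}$-path — must itself fail; thus $H$ contains a $\{1,\eta\}$-path between $u_1$ and $u_4$, and the observation then forces $1\in C(u_4)$. Letting $\eta$ range over $[6,7]\setminus(C(u_3)\cup C(u_4))$ yields precisely the second alternative, a $(1,[6,7]\setminus(C(u_3)\cup C(u_4)))_{(u_1,u_4)}$-path together with $1\in C(u_4)$.

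Part (2) is the mirror image: I swap the roles of $u_1$ and $u_2$ and of the colors $1$ and $2$ and invoke \ref{12T4123}(1) in place of \ref{12T4123}(2), with no new idea required. The step I expect to demand the most care is the endpoint observation itself, i.e.\ justifying that the color missing at $u_3$ (resp.\ $u_4$) genuinely prevents a bichromatic path from terminating there; this is where $C(u)=[1,4]$ and the per-vertex uniqueness of colors are essential, and getting the terminal-edge argument exactly right is the only delicate point. Once that is settled, the remainder is a two-line application of the contrapositive of \ref{12T4123}, involving no discharging and no degree-sum estimate.
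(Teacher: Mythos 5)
Your argument is correct and is exactly the paper's intended derivation: the paper proves this proposition with the single line ``By \ref{12T4123}(1--2), we proceed with the following proposition, or otherwise we are done,'' i.e.\ by the same contrapositive of \ref{12T4123} that you spell out, and your endpoint observations (no $(\eta,3)_{(u_4,u_3)}$-path when $\eta\in[6,7]\setminus C(u_3)$ because the unique $3$-edge at $u_3$ is $uu_3$ and $u$ carries no color in $[6,7]$; and $1\in C(u_4)$ forced by the terminal edge of a $(1,\eta)_{(u_1,u_4)}$-path when $\eta\notin C(u_4)$) are precisely the details the paper leaves implicit. Part (2) by the symmetric application of \ref{12T4123}(1) is likewise fine.
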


By \ref{12T3124}(1--2),
we proceed with the following proposition, or otherwise we are done:
\begin{proposition}
\label{123notinu2u1}
\begin{itemize}
\parskip=0pt
\item[{\rm (1)}]
	If $3\not\in C(u_2)$ and $H$ contains no $(3, l)_{(u_2, u_l)}$-path for each $l\in \{1, 4\}$,
    then $6, 7\in C(u_3)\cup C(u_4)$,
    or $H$ contains a $(1, [6, 7]\setminus (C(u_3)\cup C(u_4)))_{(u_1, u_3)}$-path, $1\in C(u_3)$;
\item[{\rm (2)}]
	If $3\not\in C(u_1)$ and $H$ contains no $(3, l)_{(u_1, u_l)}$-path for each $l\in \{2, 4\}$,
    then $6, 7\in C(u_3)\cup C(u_4)$,
    or $H$ contains a $(2, [6, 7]\setminus (C(u_3)\cup C(u_4)))_{(u_2, u_3)}$-path, $2\in C(u_3)$.
\end{itemize}
\end{proposition}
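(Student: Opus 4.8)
The plan is to obtain Proposition~\ref{123notinu2u1} as a direct logical consequence of \ref{12T3124}, mirroring exactly the way \ref{124notinu2u1} was extracted from \ref{12T4123} (with the colors $3$ and $4$ and the vertices $u_3$ and $u_4$ swapped throughout). The two parts of the statement are symmetric under the exchange $u_1 \leftrightarrow u_2$, so I would carry out part~(1) in full and then invoke that symmetry for part~(2), using clause~(1) of \ref{12T3124} in place of clause~(2).

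For part~(1) I would first note that its hypothesis, namely $3 \notin C(u_2)$ together with the absence of both a $(3,1)_{(u_2,u_1)}$-path and a $(3,4)_{(u_2,u_4)}$-path, is precisely the negation of the conclusion ``$3 \in C(u_2)$ or $H$ contains a $(3,1/4)_{(u_2,u)}$-path'' that appears in \ref{12T3124}(2). Reading clause~(2) of \ref{12T3124} contrapositively, it follows that for every color $j$ to which \ref{12T3124} applies, $H$ must contain a $(j,1)_{(u_3,u_1)}$-path. The argument therefore reduces to verifying, for the specific colors $j \in [6,7] \setminus (C(u_3)\cup C(u_4))$, that the blanket hypothesis of \ref{12T3124} is satisfied: $j \in (T_3 \cup [6,7]) \setminus C(u_3)$ and $H$ contains no $(j,4)_{(u_4,u_3)}$-path.

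The first condition is immediate, since $j \in [6,7] \setminus C(u_3)$. The second is the one point that requires a short verification, and it is where I expect the crux of the argument to lie: because $j \notin C(u_4)$ and the coloring is proper, the unique edge of color $4$ at $u_4$ is $u u_4$, so any $4$-$j$ alternating path leaving $u_4$ must begin with $u u_4$ and arrive at $u$; but $C(u)=[1,4]$ contains no color $j \in \{6,7\}$, so the path stops at $u$ and can never reach $u_3$. Hence no $(j,4)_{(u_4,u_3)}$-path exists, the blanket hypothesis of \ref{12T3124} holds, and clause~(2) delivers the desired $(j,1)_{(u_3,u_1)}$-path for each such $j$.

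Finally I would assemble the dichotomy. If $[6,7]\setminus (C(u_3)\cup C(u_4)) = \emptyset$, then $6,7 \in C(u_3)\cup C(u_4)$ and the first alternative of the conclusion holds. Otherwise, for each $j \in [6,7]\setminus(C(u_3)\cup C(u_4))$ the path produced above is a $(1,j)_{(u_1,u_3)}$-path; since $j \notin C(u_3)$, its edge incident to $u_3$ must carry color $1$, whence $1 \in C(u_3)$. This gives the second alternative, i.e.\ $H$ contains a $(1,[6,7]\setminus(C(u_3)\cup C(u_4)))_{(u_1,u_3)}$-path with $1 \in C(u_3)$, completing part~(1). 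Part~(2) then follows verbatim after interchanging $u_1$ and $u_2$ and applying \ref{12T3124}(1) instead of (2). The whole proof is purely a bookkeeping deduction from \ref{12T3124}; the only nontrivial ingredient is the observation that an alternating $(j,4)$-path out of $u_4$ is blocked at $u$.
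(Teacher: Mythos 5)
Your proposal is correct and follows the same route the paper takes, which is simply to cite \ref{12T3124}(1--2); you have filled in the details the paper leaves implicit, namely reading clauses (2) and (1) contrapositively for parts (1) and (2) respectively, and checking the blanket hypothesis of \ref{12T3124} for $j\in[6,7]\setminus(C(u_3)\cup C(u_4))$ via the observation that a $(j,4)_{(u_4,u_3)}$-path would have to start with $uu_4$ and die at $u$ since $j\notin C(u)=[1,4]$. The symmetry reduction for part (2) and the final dichotomy are handled exactly as intended.
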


If $2\not\in C(u_3)$, $H$ contains no $(1, 2)_{(u_1, u_3)}$-path,
and $3\not\in C(u_2)$, $H$ contains no $(3, i)_{(u_2, u_i)}$-path for each $i\in \{1, 4\}$,
then $(u u_3, u u_2)\to (2, 3)$ reduces the proof to Case (2.4.).
If $1\not\in C(u_3)$, $H$ contains no $(1, i)_{(u_3, u_i)}$-path for each $i\in \{2, 4\}$,
and $3\not\in C(u_1)$, $H$ contains no $(3, i)_{(u_1, u_i)}$-path for each $i\in \{2, 4\}$,
then $(u u_3, u u_1)\to (1, 3)$ reduces the proof to Case (2.4.).
Hence, we proceed with the following proposition, or otherwise we are done:
\begin{proposition}
\label{1212notu3}
\begin{itemize}
\parskip=0pt
\item[{\rm (1)}]
	If $1\not\in C(u_3)$, $H$ contains no $(1, i)_{(u_3, u_i)}$-path for each $i\in \{2, 4\}$,
    then $3\in C(u_1)$, or $H$ contains a $(3, i)_{(u_1, u_i)}$-path for some $i\in \{2, 4\}$.
\item[{\rm (2)}]
	If $2\not\in C(u_3)$, $H$ contains no $(1, 2)_{(u_1, u_3)}$-path,
    then $3\in C(u_2)$, or $H$ contains a $(3, i)_{(u_2, u_i)}$-path for some $i\in \{1, 4\}$.
\end{itemize}
\end{proposition}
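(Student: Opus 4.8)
The plan is to derive both parts as the contrapositives of the two recoloring reductions recorded in the paragraph immediately preceding the statement, so that no genuinely new argument is needed beyond making the logic explicit and checking that each swap yields an acyclic coloring. Throughout we work under the standing convention that we have not yet produced the desired acyclic edge $(\Delta+5)$-coloring of $G$ (nor reduced $a_{uv}$): if any recoloring does produce such a coloring, there is nothing left to prove. Recall the setting of Case~2.5: $d(u)=5$ with $c(uu_i)=i$ for $i\in[1,4]$, $A_{uv}=\{1,2\}$, $d(v)=6$, and $c(vu_3)=5$.

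First I would treat (1). The relevant reduction states that if $1\notin C(u_3)$, $H$ has no $(1,i)_{(u_3,u_i)}$-path for $i\in\{2,4\}$, $3\notin C(u_1)$, and $H$ has no $(3,i)_{(u_1,u_i)}$-path for $i\in\{2,4\}$, then the swap $(uu_3,uu_1)\to(1,3)$ gives an acyclic edge $(\Delta+5)$-coloring of $H$ lying in Case~2.4, where we are already done. The verification of that swap is the only mechanical content: setting $c(uu_3)=1$ and $c(uu_1)=3$ merely exchanges colors $1$ and $3$ on two edges incident to $u$, so the $(1,3)$-subgraph keeps the same edge set and stays a forest; properness at $u_3$ and $u_1$ is exactly $1\notin C(u_3)$ and $3\notin C(u_1)$; and the only new bichromatic cycles could run through $uu_3$ (now colored $1$) in colors $(1,2)$ or $(1,4)$, or through $uu_1$ (now colored $3$) in colors $(3,2)$ or $(3,4)$, all four being forbidden precisely by the four path-free hypotheses. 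Granting this, I argue by contradiction: assume the two hypotheses of (1) together with the negation of its conclusion, i.e.\ $3\notin C(u_1)$ and no $(3,i)_{(u_1,u_i)}$-path for any $i\in\{2,4\}$. Then all four conditions of the reduction hold, the swap succeeds, and we are done --- contradicting the standing assumption. Hence $3\in C(u_1)$, or some $(3,i)_{(u_1,u_i)}$-path exists, which is the conclusion of (1). Part (2) runs identically through the swap $(uu_3,uu_2)\to(2,3)$: the hypotheses $2\notin C(u_3)$ and ``no $(1,2)_{(u_1,u_3)}$-path'' together with the negated conclusion $3\notin C(u_2)$ and ``no $(3,i)_{(u_2,u_i)}$-path for $i\in\{1,4\}$'' are exactly the conditions under which that reduction certifies a descent to Case~2.4.

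The one step I would verify most carefully is the cycle bookkeeping for the swap in part~(2). Exchanging colors $2$ and $3$ leaves the $(2,3)$-subgraph a forest and blocks the cycles through $uu_2$ (colored $3$) in colors $(3,1)$ and $(3,4)$ via the negated-conclusion paths, and the cycle through $uu_3$ (colored $2$) in color $(2,1)$ via the hypothesis $(1,2)_{(u_1,u_3)}$-path-freeness; but recoloring $uu_3\to 2$ migrates the color-$2$ edge at $u$ from $uu_2$ to $uu_3$, so the $(2,4)$-subgraph changes and one must confirm that no $(2,4)_{(u_3,u_4)}$-path can close up. I expect this to be excluded not by a hypothesis listed in the statement but by structure already accumulated in Case~2.5 --- in particular a constraint forcing $4\notin C(u_3)$, which prevents any color-$4$ edge from leaving $u_3$ and hence any such $(2,4)$-cycle. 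Pinning down exactly which earlier fact supplies this is where the real care lies; the logical skeleton, namely taking the contrapositive of the two recoloring reductions, is otherwise routine.
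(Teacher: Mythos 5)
Your logical skeleton is exactly the paper's: the proposition is recorded there as nothing more than the contrapositive of the two recolourings $(uu_3,uu_1)\to(1,3)$ and $(uu_3,uu_2)\to(2,3)$ stated in the paragraph immediately preceding it, and your verification of the first swap is complete and correct (the $(1,3)$-subgraph is unchanged as an edge set, properness is the two non-membership conditions, and the four candidate new bichromatic cycles are each killed by one of the four path-free conditions).

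The issue you flagged in part (2) is genuine, but the repair you guessed at is not available: nothing accumulated in Case 2.5 forces $4\notin C(u_3)$ — we only know $3,5\in C(u_3)$, and several of the surrounding propositions explicitly allow $4\in C(u_3)$ as a live alternative. The $(2,4)$-cycle through $uu_3$ (newly coloured $2$) and $uu_4$ has to be excluded at the $u_4$ end instead: such a cycle needs a $(2,4)$-path from $u_3$ to $u_4$ whose final edge, incident with $u_4$, is coloured $2$. In sub-case (2.5.1) the unique colour-$2$ edge at $u_4$ is $vu_4$, so the path would have to continue from $v$ along a colour-$4$ edge; but $C(v)=\{1,2\}\cup[5,7]$, so $4\notin C(v)$ and no such path exists. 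In sub-case (2.5.2), where $c(vu_4)=6$ and $u_4$ may carry a colour-$2$ edge to a vertex other than $v$, this argument fails and the reduction as written in the source does not exclude the $(2,4)$-cycle either; you have located a step the paper glosses over rather than one you failed to find. Part (1) has no analogous problem because both the $(1,2)_{(u_3,u_2)}$- and $(1,4)_{(u_3,u_4)}$-paths are hypothesised away there.
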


If $1\not\in C(u_4)$, $H$ contains no $(1, i)_{(u_4, u_i)}$-path for each $i\in [2, 3]$,
and $4\not\in C(u_1)$, $H$ contains no $(4, i)_{(u_1, u_i)}$-path for each $i\in [2, 3]$,
then $(u u_4, u u_1)\to (1, 4)$.
If $2\not\in C(u_4)$, $H$ contains no $(2, i)_{(u_4, u_i)}$-path for each $i\in \{1, 3\}$,
and $4\not\in C(u_2)$, $H$ contains no $(4, i)_{(u_1, u_i)}$-path for each $i\in \{1, 3\}$,
then $(u u_4, u u_2)\to (2, 4)$.
These reduce the proof to Case (2.2.2.) or Case (2.4.3.).
Hence, we proceed with the following proposition, or otherwise we are done:
\begin{proposition}
\label{1212notu4}
\begin{itemize}
\parskip=0pt
\item[{\rm (1)}]
	If $1\not\in C(u_4)$, $H$ contains no $(1, i)_{(u_4, u_i)}$-path for each $i\in [2, 3]$,
    then $4\in C(u_1)$, or $H$ contains a $(4, i)_{(u_1, u_i)}$-path for some $i\in [2, 3]$.
\item[{\rm (2)}]
	If $2\not\in C(u_4)$, $H$ contains no $(2, i)_{(u_4, u_i)}$-path for each $i\in \{1, 3\}$,
    then $4\in C(u_2)$, or $H$ contains a $(4, i)_{(u_2, u_i)}$-path for some $i\in \{1, 3\}$.
\end{itemize}
\end{proposition}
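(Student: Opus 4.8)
The plan is to prove both parts by contraposition, reusing the two recolorings introduced just before the statement together with the fact that Cases~(2.2.2) and~(2.4.3) have already been disposed of. I will only spell out part~(1); part~(2) is the mirror image obtained by interchanging the roles of $u_1$ and $u_2$ and of the matched colors $1$ and $2$.

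For part~(1) I assume the hypothesis that $1\notin C(u_4)$ and that $H$ has no $(1,i)_{(u_4,u_i)}$-path for $i\in\{2,3\}$, and I suppose toward a contradiction that the conclusion fails, i.e.\ $4\notin C(u_1)$ and $H$ has no $(4,i)_{(u_1,u_i)}$-path for $i\in\{2,3\}$. The idea is to perform the swap $(uu_4,uu_1)\to(1,4)$ and check that it remains an acyclic edge $(\Delta+5)$-coloring of $H$. Properness is immediate: recoloring $uu_4$ to $1$ is legal because $1\notin C(u_4)$, recoloring $uu_1$ to $4$ is legal because $4\notin C(u_1)$, and at $u$ the four incident edges still carry the distinct colors $\{1,2,3,4\}$.

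The only bichromatic cycles the swap could create must run through $u$ along one of the two recolored edges, so the color pairs to rule out are $(1,2)$ and $(1,3)$ (through the new $1$-edge $uu_4$) and $(4,2)$ and $(4,3)$ (through the new $4$-edge $uu_1$). By Lemma~\ref{lemma06} such a cycle would force, respectively, a $(1,i)_{(u_4,u_i)}$-path or a $(4,i)_{(u_1,u_i)}$-path with $i\in\{2,3\}$, and all four of these are excluded by the assumed data. The remaining pair $(1,4)$ needs a separate but automatic argument: a $(1,4)$-path joining $u_1$ and $u_4$ off $u$ would already close a $(1,4)$-cycle together with the original edges $uu_1,uu_4$, contradicting the acyclicity of the starting coloring; hence no $(1,4)$-cycle appears either. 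Thus $(uu_4,uu_1)\to(1,4)$ yields a bona fide acyclic edge $(\Delta+5)$-coloring $c'$ of $H$. Since $c'$ leaves $C(u)=\{1,2,3,4\}$ and $C(v)$ unchanged but moves a matched color onto the $3$-cycle neighbor $u_4$, after renaming colors so that exactly one of the two special neighbors $u_3,u_4$ carries a matched color the configuration becomes an instance of Case~(2.2.2) or Case~(2.4.3) (the choice being dictated by $c(vu_3)=5$ and by $c(vu_4)$). In both of those cases an acyclic edge $(\Delta+5)$-coloring of $G$---or one of $H$ with $|C(u)\cap C(v)|\le 1$---has already been obtained, contradicting our standing position in the open remainder of Case~2.5; hence the conclusion of part~(1) holds. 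Part~(2) follows verbatim from the twin swap $(uu_4,uu_2)\to(2,4)$, whose validity rests on the hypotheses ruling out the $(2,1),(2,3),(4,1),(4,3)$-cycles while $(2,4)$ is again excluded for free.

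I expect the genuine obstacle to be the acyclicity bookkeeping rather than anything conceptual: one must be sure the swap introduces no bichromatic cycle beyond the four explicitly forbidden ones, which hinges on the slightly subtle observation that the swapped pair $(1,4)$ (respectively $(2,4)$) is safe purely because the original coloring was acyclic, and on correctly matching the resulting coloring to the precise already-settled subcase.
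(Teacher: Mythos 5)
Your proof is correct and matches the paper's own argument: the paper establishes this proposition by exactly the same contrapositive swaps $(uu_4,uu_1)\to(1,4)$ and $(uu_4,uu_2)\to(2,4)$, reducing to the already-settled Cases (2.2.2) and (2.4.3). You merely spell out the acyclicity bookkeeping (including the $(1,4)$-pair) that the paper leaves implicit.
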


If $3\not\in C(v_1)$ and $H$ contains no $(3, i)_{(v_1, v_i)}$-path for each $i\in C(v)$,
then $v v_1\to 3$ reduces the proof to Case (2.4.).
Hence, we proceed with the following proposition:
\begin{proposition}
\label{123v13v2}
\begin{itemize}
\parskip=0pt
\item[{\rm (1)}]
	$3\in C(v_1)$, or $H$ contains a $(3, 2/5/6/7)_{(v_1, v)}$-path.
\item[{\rm (2)}]
	$3\in C(v_2)$, or $H$ contains a $(3, 1/5/6/7)_{(v_2, v)}$-path.
\end{itemize}
\end{proposition}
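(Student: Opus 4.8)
The plan is to prove both parts by the same recoloring argument, which is simply the contrapositive of the reduction already sketched just before the statement. For part (1) I would argue by contradiction: assume $3\notin C(v_1)$ and $H$ contains no $(3,j)_{(v_1,v)}$-path for any $j\in\{2,5,6,7\}$. Recall that in Case (2.5.) we have $d(u)=5$, $d(v)=6$, $\{i_1,i_2\}=\{1,2\}$ and $uv\notin E(H)$, so the colour set of $v$ in $H$ is $C(v)=\{1,2\}\cup[5,7]=\{1,2,5,6,7\}$ with $c(vv_1)=1$; in particular $3\notin C(v)$.

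The first step is to recolor the edge $vv_1$ from $1$ to $3$ and check that the resulting colouring $c'$ is still an acyclic edge $(\Delta+5)$-colouring of $H$. Properness is immediate since $3\notin C(v_1)$ and $3\notin C(v)$. For acyclicity, the only edge whose colour changed is $vv_1$, so any newly created bichromatic cycle must be a $(3,j)$-cycle through $vv_1$; closing such a cycle at both $v$ and $v_1$ forces a $(3,j)_{(v_1,v)}$-path in $H$ whose two end-edges are coloured $j$, for some colour $j$ incident to $v$ under $c'$, i.e. $j\in C(v)\setminus\{1\}=\{2,5,6,7\}$. (Colour $1$ drops out precisely because after recolouring $1\notin C(v)$, so no $(3,1)$-cycle can close at $v$; and no cycle can route through $u$ since $uv\notin E(H)$.) Our standing assumption rules out exactly these paths, so no bichromatic cycle appears and $c'$ is acyclic.

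Finally I would observe that under $c'$ we have $A_{uv}=C(u)\cap C(v)=\{1,2,3,4\}\cap\{2,3,5,6,7\}=\{2,3\}$, still of size two, while $c(vu_3)=5$ is unchanged. By the symmetry used to organise Cases (2.1.)--(2.5.), the pair $\{2,3\}$ with $c(vu_3)\notin A_{uv}$ is a mere relabelling of Case (2.4.), which has already been settled; hence $c'$ either yields an acyclic edge $(\Delta+5)$-colouring of $G$ or is further modified to bring $|C(u)\cap C(v)|$ below $2$, and in either case we are done, contradicting that Case (2.5.) is still open. Part (2) is entirely symmetric: I would recolor $vv_2$ from $2$ to $3$, after which the relevant colours at $v$ become $C(v)\setminus\{2\}=\{1,5,6,7\}$ (explaining the list $1/5/6/7$ in the statement) and the same reduction to Case (2.4.) applies.

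The main obstacle is not the recolouring but making the bichromatic-cycle bookkeeping precise: one must confirm that the colours $j$ capable of creating a $(3,j)$-cycle through $vv_1$ are \emph{exactly} $\{2,5,6,7\}$ (respectively $\{1,5,6,7\}$ for $vv_2$), so that elimination of colour $1$ (respectively $2$) from $C(v)$ by the recolouring is what forbids the extra cycle, and that no cycle can pass through $u$. Once this is pinned down, landing in the already-handled Case (2.4.) is immediate.
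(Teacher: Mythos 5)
Your proof is correct and is essentially the paper's own argument: the paper disposes of this in one line ("if $3\notin C(v_1)$ and $H$ contains no $(3,i)_{(v_1,v_i)}$-path for each $i\in C(v)$, then $vv_1\to 3$ reduces the proof to Case (2.4.)"), and you have simply spelled out the properness/acyclicity check for the recoloring and the relabelling that identifies the resulting configuration $A_{uv}=\{2,3\}$ (resp.\ $\{1,3\}$) with the already-settled Case (2.4.). No gap.
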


{(2.5.1)} $c(vu_4) = 2$.
Let $C(u_1) = W_1\cup T_2\cup C_{12}$, $C(u_2) = W_2\cup T_1\cup C_{12}$,
where $T_{u v} = T_1\cup T_2\cup C_{12}$.
Then $T_1\subseteq B_2\subseteq C(u_2)\cap C(u_4)$, $T_2\subseteq B_1\subseteq C(u_1)$.

By \ref{123v13v2}(2), we proceed with the following proposition:
\begin{proposition}
\label{12T43167neemptyset}
\begin{itemize}
\parskip=0pt
\item[{\rm (1)}]
	$3\in C(u_4)$, or $H$ contains a $(3, i)_{(u_4, v_i)}$-path for some $i\in \{1, 6, 7\}$.
\item[{\rm (2)}]
	$3/1/6/7\in C(u_4)$, $T_4\ne \emptyset$, and there exists a $b_2\in [5, 7]\cap C(u_4)$.
    such that $H$ contains a $(b_2, j)_{(u_4, v)}$-path for some $j\in T_4$.
\end{itemize}
\end{proposition}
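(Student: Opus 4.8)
The plan is to deduce Proposition~\ref{12T43167neemptyset} directly from \ref{123v13v2}(2) after specializing to the present configuration. Here $c(v u_4)=2$ identifies $u_4$ with the neighbour $v_2$ of $v$ joined by the colour-$2$ edge, and $c(v u_3)=5$ identifies $u_3$ with $v_5$. With this reading, \ref{123v13v2}(2) asserts that either $3\in C(u_4)$, or $H$ has a $(3,j)_{(u_4,v)}$-path for some $j\in\{1,5,6,7\}$. To obtain statement~(1) I would first discard $j=5$, so assume $3\notin C(u_4)$ (otherwise (1) holds at once). A $(3,5)_{(u_4,v)}$-path must reach $v$ through the unique colour-$5$ edge $v u_3$, so $u_3$ is its penultimate vertex and the path edge into $u_3$ is coloured $3$; by properness the only colour-$3$ edge at $u_3$ is $u u_3$, forcing $u$ onto the path. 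Since $u\notin\{u_4,v\}$, $u$ is interior and would need a second path edge coloured $5$, impossible because $C(u)=[1,4]$. Hence no such path exists, leaving $j\in\{1,6,7\}$. For these $j$, the fact that $3\notin C(v)$ forces the last edge of a $(3,j)_{(u_4,v)}$-path to be $v v_j$; deleting it yields the $(3,j)_{(u_4,v_j)}$-path required by~(1).

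For the membership assertion $3/1/6/7\in C(u_4)$ in~(2) I would argue from~(1): if $3\notin C(u_4)$, then the first edge at $u_4$ of the $(3,i)_{(u_4,v_i)}$-path cannot be coloured $3$, hence is coloured $i\in\{1,6,7\}$, so $i\in C(u_4)$. For $T_4\neq\emptyset$ I would count colours. The set $C(u_4)$ contains $2$ (edge $v u_4$), $4$ (edge $u u_4$), and at least one of $\{3,1,6,7\}$ from the previous step; these three colours all lie outside $T_{uv}=[8,\Delta+5]$. Since $d_H(u_4)\le\Delta$, at most $\Delta-3$ colours of $C(u_4)$ can fall inside $T_{uv}$, which is strictly fewer than $|T_{uv}|=t_{uv}=\Delta-2$; therefore $T_4=T_{uv}\setminus C(u_4)\neq\emptyset$.

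The $b_2$-path is then the content of Lemma~\ref{auvge2}(1.2) at the end $v_2=u_4$, exactly as in the $d(v)=5$ analysis. Here $R_2=T_{uv}\setminus C(v_2)=T_4\neq\emptyset$, and for each $j\in R_2$ the lemma yields a $(b_2,j)_{(v_2,v_{b_2})}$-path with $b_2\in D_2\setminus\{2,i_j\}$, where $D_2=C(u_4)\cap C(v)\subseteq\{1,2,5,6,7\}$ and $i_j\in A_{uv}$ satisfies $j\in B_{i_j}$. The key point excluding $b_2=1$ is \ref{prop3209}(3), which pins $i_j\in A_{uv}\setminus\{2\}=\{1\}$; hence $b_2\in D_2\setminus\{1,2\}\subseteq[5,7]\cap C(u_4)$. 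Translating the path $(b_2,j)_{(u_4,v_{b_2})}$ to the $(b_2,j)_{(u_4,v)}$ form (append the colour-$b_2$ edge $v_{b_2}v$) gives~(2) for the chosen $j\in T_4$.

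The main obstacle is the ruling-out of the $(3,5)_{(u_4,v)}$-path in~(1): every other step is a short deduction from the quoted lemmas, but this one genuinely uses the local two-triangle structure ($u u_3,v u_3\in E(G)$ with $c(u u_3)=3$ and $c(v u_3)=5$) together with properness of $c(\cdot)$ at $u_3$ and $u$, and it is precisely what licenses dropping colour $5$ from~(1). I would also take care to confirm the two routine points that the last edge at $v$ of any relevant path cannot be coloured $3$ (as $3\notin C(v)$) and that the correspondence between $(\cdot)_{(u_4,v)}$- and $(\cdot)_{(u_4,v_i)}$-paths is valid, so that the statements of~(1) and~(2) match their derivations verbatim.
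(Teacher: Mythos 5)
Your proposal is correct and follows essentially the same route as the paper, which justifies the proposition only by citing \ref{123v13v2}(2) and leaves the remaining steps implicit. You supply exactly the intended missing details: colour $5$ is excluded because a $(3,5)_{(u_4,v)}$-path would be forced through $v_5=u_3$ and then $u$ via the colour-$3$ edge $uu_3$, yet $5\notin C(u)=[1,4]$; the membership $3/1/6/7\in C(u_4)$ and the count $|C(u_4)\cap T_{uv}|\le\Delta-3<\Delta-2=t_{uv}$ give $T_4\ne\emptyset$; and Lemma~\ref{auvge2}(1.2) together with \ref{prop3209}(3) (forcing $i_j=1$) pins $b_2\in D_2\setminus\{1,2\}\subseteq[5,7]\cap C(u_4)$ and yields the $(b_2,j)_{(u_4,v)}$-path.
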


Assume that $5\not\in C(u_1)$ and $H$ contains no $(5, 4)_{(u_1, u_4)}$-path.
If $H$ contains no $(5, 2)_{(u_2, u_1)}$-path,
then $u u_1\to 5$ reduces the proof to the Case 2.3.
If $1\not\in C(u_2)$, $H$ contains no $(1, i)_{(u_2, u_i)}$-path for each $i\in \{3, 4\}$,
and $H$ contains no $(5, j)_{(u_1, u_3)}$-path for some $j\in B_1$,
then $(u u_1, u u_2, u v)\to (5, 1, j)$.
Hence, we proceed with the following proposition, or otherwise we are done:
\begin{proposition}
\label{125notinu1}
Assume that $5\not\in C(u_1)$ and $H$ contains no $(5, 4)_{(u_1, u_4)}$-path.
\begin{itemize}
\parskip=0pt
\item[{\rm (1)}]
	$H$ contains a $(5, 2)_{(u_1, u_2)}$-path.
\item[{\rm (2)}]
	If $1\not\in C(u_2)$, $H$ contains no $(1, i)_{(u_2, u_i)}$-path for each $i\in \{3, 4\}$,
    then $H$ contains a $(5, B_1)_{(u_1, u_3)}$-path, a $(5, T_4)_{(u_1, u_3)}$-path, and $T_4\subseteq C(u_3)$
\end{itemize}
\end{proposition}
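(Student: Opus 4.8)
The plan is to follow the recoloring-or-reduction template used throughout this subsection: I would attempt an explicit recoloring of the edges incident to $u$ and $v$ and argue that either it yields an acyclic edge $(\Delta+5)$-coloring of $G$ (or a configuration already disposed of in an earlier case), so that we are done, or else it is blocked by exactly the bichromatic path asserted in the statement. Throughout I would use that $C(u)=[1,4]$, $C(v)=\{1,2\}\cup[5,7]$, so $T_{uv}=[8,\Delta+5]$, that $v$ carries no color $3$ and no color $\ge 8$, and that $v u_3$ is the unique $5$-edge at $u_3$ while $v u_4$ is the vertex $v_2$ (since $c(vu_4)=2$).

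For part (1) I would consider the single move $uu_1\to 5$, which is legal because $5\notin C(u_1)$ and $5\notin C(u)$. Any bichromatic cycle it creates must run through $uu_1$ and be a $(5,x)$-cycle with $x\in\{2,3,4\}$, i.e.\ be witnessed by a $(5,x)_{(u_1,u_x)}$-path. The hypothesis kills $x=4$. The case $x=3$ is automatically impossible: a $(5,3)_{(u_1,u_3)}$-path would have to enter $u_3$ along its unique $5$-edge $vu_3$ and hence reach $v$ along a $3$-edge, but $v$ has no color $3$. Thus the only surviving obstruction is a $(5,2)_{(u_1,u_2)}$-path; if it were absent, the recoloring would be acyclic and leave $C(u)\cap C(v)=\{2,5\}$ with $(vu_3,vu_4)$ colored $(5,2)$, which is Case 2.3 after relabeling $5$ as $1$ and is therefore already resolved. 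Hence, unless we are done, $H$ contains the claimed $(5,2)_{(u_1,u_2)}$-path.

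For part (2) I would use the triple recoloring $(uu_1,uu_2,uv)\to(5,1,j)$ for $j\in B_1$, which colors $uv$ and would finish $G$. The extra hypotheses make the first two moves legal and kill the $(1,3)$- and $(1,4)$-cycles at $u_2$, while the $(5,3)$-obstruction is ruled out as in part (1). In the recolored graph $C(u)\cap C(v)=\{1,5\}$, so the new edge $uv=j$ can only create $(j,1)$- or $(j,5)$-cycles, and the $(j,5)$-cycle is witnessed precisely by a $(5,j)_{(u_1,u_3)}$-path. Therefore, if no $(5,j)_{(u_1,u_3)}$-path exists for some $j\in B_1$, the recoloring succeeds; so unless we are done $H$ contains a $(5,j)_{(u_1,u_3)}$-path for every $j\in B_1$. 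Finally I would note $T_4\subseteq B_1$: every color of $T_{uv}$ lies in $B_1\cup B_2$ and $B_2\subseteq C(u_2)\cap C(v_2)=C(u_2)\cap C(u_4)$, so a color missing from $C(u_4)$ cannot lie in $B_2$. Applying the previous sentence to each $j\in T_4$ gives the $(5,T_4)_{(u_1,u_3)}$-paths, and since $vu_3$ is the unique $5$-edge at $u_3$ and $v$ carries no color $\ge 8$, each such path must end at $u_3$ with its $j$-edge, forcing $T_4\subseteq C(u_3)$.

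I expect the main obstacle to be the cycle bookkeeping for the triple recoloring in part (2), specifically confirming that the $(j,1)$-cycle through the newly colored edge $uv$ cannot arise for the chosen $j$: such a cycle would require a $(1,j)$-path from $v_1$ to $u_2$ ending with $j$ at $u_2$, which coexists with the $(1,j)_{(u_1,v_1)}$-path guaranteed by $j\in B_1$, so this is where the interplay of the three simultaneous color changes, the hypothesis $1\notin C(u_2)$, and the maximal-path property of Lemma~\ref{lemma06} must be invoked carefully. By contrast the single-edge analysis of part (1) and the endpoint argument for the $T_4$-conclusions are routine.
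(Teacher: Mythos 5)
Your proposal is correct and follows exactly the paper's argument: part (1) is obtained by observing that $uu_1\to 5$ reduces to Case 2.3 unless a $(5,2)_{(u_1,u_2)}$-path blocks it, and part (2) by the same triple recoloring $(uu_1,uu_2,uv)\to(5,1,j)$ for $j\in B_1$, whose only surviving obstruction is the $(5,j)_{(u_1,u_3)}$-path. The supporting details you supply (ruling out the $(5,3)$- and $(j,1)$-cycles via the absence of color $3$ and of colors $\ge 8$ at $v$ together with Lemma~\ref{lemma06}, and deducing $T_4\subseteq B_1\subseteq C(u_3)$ from $B_2\subseteq C(u_4)$) are precisely what the paper leaves implicit, and they are all sound.
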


Assume that $2\not\in C(u_1)$ and $H$ contains no $(2, 3)_{(u_1, u_3)}$-path.
Let $u u_1\to 2$.
If $5\not\in C(u_2)$ and $H$ contains no $(4, 5)_{(u_2, u_4)}$-path,
then $u u_2\to 5$ reduces the proof to the Case 2.3.
Together with Corollary~\ref{auv2d(u)5}(2.1),
we proceed with the following proposition, or otherwise we are done:
\begin{proposition}
\label{122notinu12323}
Assume that $2\not\in C(u_1)$ and $H$ contains no $(2, 3)_{(u_1, u_3)}$-path.
\begin{itemize}
\parskip=0pt
\item[{\rm (1)}]
	$5\in C(u_2)$ or $H$ contains a $(4, 5)_{(u_2, u_4)}$-path.
\item[{\rm (2)}]
	$1\in C(u_2)$ or $H$ contains a $(1, i)_{(u_2, u_i)}$-path for some $i\in [3, 4]$.
\end{itemize}
\end{proposition}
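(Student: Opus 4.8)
The plan is to read the two statements as the two ways the present coloring can fail to be pushed into an already-settled configuration, and to show that in each failure case a single local recolouring either resolves the edge $uv$ or contradicts a structural fact established earlier in (2.5.1). Throughout we use the standing data of Case~(2.5.1): $\{i_1,i_2\}=\{1,2\}$, $c(uu_i)=i$, $c(vu_3)=5$, $c(vu_4)=2$, so that $C(v)=\{1,2,5,6,7\}$ and $v_2=u_4$, together with the hypotheses $2\notin C(u_1)$ and ``no $(2,3)_{(u_1,u_3)}$-path'' that precede the proposition.

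For part~(1) I would argue by contrapositive, exactly as the preamble indicates. Suppose $5\notin C(u_2)$ and $H$ has no $(4,5)_{(u_2,u_4)}$-path. Recolour $uu_1\to 2$ and $uu_2\to 5$, producing a coloring $c'$ with $C(u)=\{2,3,4,5\}$, so that $C(u)\cap C(v)=\{2,5\}$ while $c'(vu_3)=5$ and $c'(vu_4)=2$ both lie in this set and $c'(uu_3)=3$, $c'(uu_4)=4$ do not; after relabelling the four colours at $u$ this is precisely the configuration of Case~2.3, which is already resolved, so we would be done. The two stated hypotheses are exactly what is needed for $c'$ to be acyclic: the colour conditions at $u_1,u_2$ are $2\notin C(u_1)$ and $5\notin C(u_2)$, and the only new bichromatic cycles run through $uu_1$ (colour $2$) or $uu_2$ (colour $5$). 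These are blocked one by one — a $(2,3)_{(u_1,u_3)}$-cycle by the standing hypothesis, a $(4,5)_{(u_2,u_4)}$-cycle by the stated hypothesis, the $(2,5)$-cycle because after recolouring $u_2$ carries no colour $2$, and the $(2,4)$- and $(3,5)$-cycles because $3,4\notin C(v)$ forces any such alternating path to terminate at $v$ on a missing colour (using that $vu_4$ and $vu_3$ are the unique colour-$2$ and colour-$5$ edges at $u_4$ and $u_3$). Hence if both hypotheses hold we finish, so we may assume their negation, which is exactly statement~(1).

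For part~(2) I would argue by contradiction: suppose $1\notin C(u_2)$ and $H$ has neither a $(1,3)_{(u_2,u_3)}$-path nor a $(1,4)_{(u_2,u_4)}$-path. I swap the two colours, setting $uu_1\to 2$ and $uu_2\to 1$, to obtain a coloring $c'$ that still has $C(u)\cap C(v)=\{1,2\}$ but carries colour $1$ on $uu_2$ and colour $2$ on $uu_1$, i.e. $(uu_a,uu_b)_{c'}=(1,2)$ with $a=2\neq 1$ and $b=1\neq 2$. The swap is legal at $u_1,u_2$ by $2\notin C(u_1)$ and $1\notin C(u_2)$, and every candidate new cycle is excluded: the $(2,3)_{(u_1,u_3)}$-, $(1,3)_{(u_2,u_3)}$- and $(1,4)_{(u_2,u_4)}$-cycles by hypothesis; the $(1,2)_{(u_1,u_2)}$-cycle because the swap leaves $u_1$ without colour $1$ and $u_2$ without colour $2$; and the $(2,4)_{(u_1,u_4)}$-cycle because its colour-$2$ end at $u_4$ must be the unique edge $vu_4$, which would force a colour-$4$ edge at $v$, impossible since $4\notin C(v)$. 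Thus $c'$ is a valid acyclic $(\Delta+5)$-coloring meeting the hypothesis of Corollary~\ref{auv2d(u)5}(2.1), which forces $T_{uv}\subseteq C(v_2)=C(u_4)$, i.e. $T_4=\emptyset$; this contradicts \ref{12T43167neemptyset}(2), and so statement~(2) holds. The main obstacle, and the only non-bookkeeping part, is the acyclicity check of these two recolourings using only the listed path conditions; the recurring device is that $C(v)$ omits colours $3$ and $4$, so any alternating path forced through $v$ on colour $3$ or $4$ cannot exist, which together with the uniqueness of each colour class at $u_3$ and $u_4$ kills exactly the cycles not covered by the explicit hypotheses.
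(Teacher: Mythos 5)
Your proof is correct and follows essentially the same route as the paper: part (1) is exactly the paper's recolouring $uu_1\to 2$, $uu_2\to 5$ reducing to Case~2.3, and part (2) is the colour swap $uu_1\to 2$, $uu_2\to 1$ feeding Corollary~\ref{auv2d(u)5}(2.1), whose conclusion $T_{uv}\subseteq C(v_2)=C(u_4)$ contradicts $T_4\ne\emptyset$ from \ref{12T43167neemptyset}(2). You have merely made explicit the acyclicity checks (via $3,4\notin C(v)$ and the uniqueness of the colour-$2$ and colour-$5$ edges at $u_4$ and $u_3$) and the target contradiction that the paper leaves implicit.
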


Assume that $5\not\in C(u_2)$ and $H$ contains no $(5, i)_{(u_2, u_i)}$-path for each $i\in \{1, 4\}$.
One sees that the same argument applies if $uu_2\to 5$.
Hence, we proceed with the following proposition:
\begin{proposition}
\label{125notu2514}
If $5\not\in C(u_2)$ and $H$ contains no $(5, i)_{(u_2, u_i)}$-path for each $i\in \{1, 4\}$, then
\begin{itemize}
\parskip=0pt
\item[{\rm (1)}]
	$H$ contains a $(5, T_{uv}\setminus B_1)_{(u_2, u_3)}$-path, a $(5, T_1)_{(u_2, u_3)}$-path,
    and $T_1\subseteq T_{u v}\setminus B_1\subseteq C(u_3)$;
\item[{\rm (2)}]
	$T_3\ne \emptyset$ and there exists a $b_5\in \{2, 6, 7\}\cap C(u_3)$ such that $H$ contains a $(b_5, j)_{(u_3, v)}$-path for some $j\in T_3$.
\end{itemize}
\end{proposition}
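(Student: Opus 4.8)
The plan is to recolor the edge $uu_2$ with color $5$ and then read off the two conclusions from the structural machinery already available for colorings of $H$ with $a_{uv}=2$. Recoloring $uu_2\to 5$ is proper at $u_2$ since $5\notin C(u_2)$, and the two hypotheses exclude precisely the bichromatic cycles through the recoloured edge that could meet $u$ at its color-$1$ or color-$4$ edge, namely a $(5,1)_{(u_2,u_1)}$- and a $(5,4)_{(u_2,u_4)}$-path. The only remaining threat is a $(5,3)$-cycle meeting $u$ at $uu_3$; such a cycle would require a $(5,3)_{(u_2,u_3)}$-path reaching $u_3$ through its unique $5$-edge $vu_3$, hence passing through $v$ on a color-$3$ edge, which is impossible because $3\notin C(v)=\{1,2,5,6,7\}$. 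Thus $uu_2\to 5$ yields an acyclic edge $(\Delta+5)$-coloring of $H$ with $A_{uv}=\{1,5\}$, in which the triangle vertex $u_3$ now carries the common color $5$ (as $c(vu_3)=5$) while still satisfying $c(uu_3)=3\in U_{uv}$. This is what the phrase ``the same argument applies if $uu_2\to 5$'' is meant to exploit.

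First I would apply \ref{prop3001} to this new coloring: it gives $T_{uv}\subseteq B_1\cup B_5$, where now $B_5=\{j:\ H\text{ contains a }(5,j)_{(u_2,u_3)}\text{-path}\}$. Since $B_1\subseteq C(u_1)$, every color of $T_1=T_{uv}\setminus C(u_1)$ lies outside $B_1$, so $T_1\subseteq T_{uv}\setminus B_1\subseteq B_5$; and membership in $B_5$ both exhibits the corresponding $(5,\cdot)_{(u_2,u_3)}$-path and forces that color into $C(u_3)$. This delivers part (1) in one stroke: $H$ contains a $(5,T_{uv}\setminus B_1)_{(u_2,u_3)}$-path, hence a $(5,T_1)_{(u_2,u_3)}$-path, and $T_1\subseteq T_{uv}\setminus B_1\subseteq C(u_3)$.

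For part (2) I would invoke Lemma \ref{auvge2}(1.2) at the color-$5$ neighbor $u_3$ of $v$. In the recoloured graph $v_5=u_3$, so the $v$-side residual set is $R_5=T_{uv}\setminus C(v_5)=T_{uv}\setminus C(u_3)=T_3$, and $c(uu_3)=3\in U_{uv}$ places $u_3$ exactly in the scope of that lemma; it then produces, for each $j\in R_5=T_3$, a $(b_5,j)_{(u_3,v)}$-path with $b_5\in\bigl(C(u_3)\cap C(v)\bigr)\setminus\{1,5\}=\{2,6,7\}\cap C(u_3)$, which is the assertion of part (2). The one missing piece is $T_3\neq\emptyset$, and this is where I expect the real work to sit: I would derive it from the cardinality identity $t_{uv}=\Delta-2$ together with the degree restriction that configuration ($A_8$) imposes on the triangle vertex $u_3$ (keeping $d(u_3)$ strictly below $\Delta$, so that some color of $T_{uv}$ must be absent from $C(u_3)$), exactly as $T_3\neq\emptyset$ and $T_4\neq\emptyset$ are forced in the parallel situations treated by Corollary \ref{auv2d(u)5} and in Case~2 of Lemma \ref{3to21}. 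Everything else is bookkeeping; the substance of the proof is the single observation that recolouring $uu_2\to 5$ legitimately transports us into the $A_{uv}=\{1,5\}$ regime, after which \ref{prop3001}, Corollary \ref{auv2}, and Lemma \ref{auvge2} read off (1) and (2) directly.
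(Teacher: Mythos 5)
Your overall strategy is exactly the paper's: the recolouring $uu_2\to 5$ is legitimate (your check that a $(5,3)$-cycle would have to reach $u_3$ through its unique colour-$5$ edge $vu_3$ and then leave $v$ on a colour-$3$ edge is the right observation), it moves you into the regime $C(u)\cap C(v)=\{1,5\}$ with $v_5=u_3$, and then \ref{prop3001} gives $T_{uv}\subseteq B_1\cup B_5'$, hence $T_1\subseteq T_{uv}\setminus B_1\subseteq B_5'\subseteq C(u_2)\cap C(u_3)$, which is part (1); Lemma~\ref{auvge2}(1.2) applied at $v_5=u_3$ with $R_5=T_3$ then produces the $b_5\in D_5\setminus\{5,1\}=\{2,6,7\}\cap C(u_3)$ of part (2). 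All of that is correct and is what the paper's ``the same argument applies if $uu_2\to 5$'' refers to.

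The gap is in your derivation of $T_3\ne\emptyset$. You propose to get it from ``$d(u_3)$ strictly below $\Delta$,'' attributed to configuration ($A_8$), but ($A_8$) only gives $d(u_3)\le 7$ (indeed $d(u_3)\le 6$ in ($A_{8.1}$)--($A_{8.3}$), or $u_3$ is $(7,4^-)$), while $\Delta$ can equal $6$ or $7$ here; e.g.\ ($A_{8.1}$) with $\Delta=6$ permits $d(u_3)=6=\Delta$, and then $C(u_3)=\{3,5\}\cup T_{uv}$ (size exactly $\Delta$) is perfectly consistent with the degree bound, so no colour of $T_{uv}$ is forced out of $C(u_3)$. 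Your analogy with Case~2 of Lemma~\ref{3to21} does not transfer because there $t_{uv}=\Delta-1$, so two guaranteed non-$T_{uv}$ colours already suffice, whereas here $t_{uv}=\Delta-2$ and $\{3,5\}$ gives only $|C(u_3)\cap T_{uv}|\le\Delta-2=t_{uv}$, which is not strict. What is actually needed is a \emph{third} colour of $[1,7]$ in $C(u_3)$, obtained as in \ref{123v13v2}/\ref{12T43167neemptyset}(1): if $T_3=\emptyset$ then $C(u_3)=\{3,5\}\cup T_{uv}$ exactly, and recolouring $vu_3\to 4$ is then proper and creates no bichromatic cycle, since $\bigl(C(v)\setminus\{5\}\bigr)\cap\bigl(C(u_3)\setminus\{5\}\bigr)=\{1,2,6,7\}\cap(\{3\}\cup T_{uv})=\emptyset$; this reduces to the already-settled case $C(u)\cap C(v)\ni 4$, so ``we are done.'' Equivalently, $4/1/2/6/7\in C(u_3)$ must hold, and only then does the count $d(u_3)\ge 3+t_{uv}=\Delta+1>\Delta$ force $T_3\ne\emptyset$.
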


If $H$ contains no $(4, i)_{(u_4, v)}$-path for each $i\in \{1\}\cup [5, 7]$,
and $H$ contains no $(j, 3)_{(u_4, u_3)}$-path for some $j\in T_4$,
then $(v u_4, u u_4, u v)\overset{\divideontimes}\to (4, j, T_1)$.
Hence, we proceed with the following proposition:
\begin{proposition}
\label{124i4i3j3j}
\begin{itemize}
\parskip=0pt
\item[{\rm (1)}]
	If $H$ contains no $(4, i)_{(u_4, v)}$-path for each $i\in \{1\}\cup [5, 7]$,
    then $H$ contains a $(3, T_4)_{(u_4, u_3)}$-path.
\item[{\rm (2)}]
	If $3\not\in C(u_4)$, or $T_2\setminus C(u_4)\ne \emptyset$, or $T_4\setminus C(u_3)\ne \emptyset$,
    then $H$ contains a $(4, i)_{(u_4, v)}$-path for some $i\in \{1\}\cup [5, 7]$.
\end{itemize}
\end{proposition}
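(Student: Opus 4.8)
The plan is to prove both parts by the \emph{recolor-or-done} template used throughout this subsection: with $H=G-uv$ carrying its inductive acyclic edge $(\Delta+5)$-coloring $c(\cdot)$, I assume the asserted conclusion fails and exhibit a short recoloring of the edges around $u$, $v$, $u_3$, $u_4$ (written with $\overset{\divideontimes}\to$) that extends $c(\cdot)$ to a valid acyclic edge $(\Delta+5)$-coloring of $G$; since that already finishes the inductive step, the conclusion must hold in the remaining branch. I will freely use that here $C(u)=[1,4]$, $C(v)=\{1,2,5,6,7\}$, $u_4=v_2$ and $u_3=v_5$ (so $2,4\in C(u_4)$ and $3,5\in C(u_3)$), and that $T_{uv}\subseteq B_1\cup B_2$ by \ref{prop3001}.

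For part~(1) I assume there is no $(4,i)_{(u_4,v)}$-path for each $i\in\{1\}\cup[5,7]$ and suppose, toward a contradiction, that for some $j\in T_4$ there is no $(3,j)_{(u_4,u_3)}$-path. I would then apply the swap $(vu_4,uu_4,uv)\overset{\divideontimes}\to(4,j,T_1)$: recolor $vu_4$ from $2$ to $4$, recolor $uu_4$ from $4$ to $j$, and colour the missing edge $uv$ with a colour of $T_1$. Local properness is immediate, since the two edges at $u_4$ now carry the distinct colours $4$ and $j$ with $j\in T_4$ (so $j\notin C(u_4)$), and $c(uv)\in T_1\subseteq T_{uv}$ avoids $C(u)\cup C(v)$. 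The acyclicity check then splits into exactly the three ways a new bichromatic cycle could pass through a recoloured edge, each excluded by one hypothesis: the $uu_4\to j$ step can only close a $(3,j)$-cycle through $u_3$, ruled out by the absence of a $(3,j)_{(u_4,u_3)}$-path; the $vu_4\to4$ step can only close a $(4,i)$-cycle with $i\in C(v)=\{1,5,6,7\}$, ruled out by the standing hypothesis; and colouring $uv$ from $T_1$ can only close a $(c(uv),1)$-cycle, which dies at $u_1$ because $c(uv)\in T_1$ forces $c(uv)\notin C(u_1)$ (Lemma~\ref{lemma06}). The resulting colouring of $G$ is the contradiction, so a $(3,j)_{(u_4,u_3)}$-path exists for every $j\in T_4$, i.e.\ the $(3,T_4)_{(u_4,u_3)}$-path of~(1).

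Part~(2) is the logical companion. Any $(3,T_4)_{(u_4,u_3)}$-path forces $3\in C(u_4)$ (its end at $u_4$ must use colour $3$) and $T_4\subseteq C(u_3)$ (each $j\in T_4$ lies on a $(3,j)$-path reaching $u_3$ through a $j$-edge); hence if $3\notin C(u_4)$ or $T_4\setminus C(u_3)\neq\emptyset$, part~(1) already fails, and that same failure means the swap above succeeds for a suitable $j$, so a $(4,i)_{(u_4,v)}$-path must be present. The genuinely new trigger is $T_2\setminus C(u_4)\neq\emptyset$: here I take $j\in T_2\cap T_4$, note that $j\notin C(u_2)$ gives $j\notin B_2$ and hence $j\in B_1\subseteq C(u_1)\cap C(v_1)$ (a guaranteed $(1,j)_{(u_1,v_1)}$-path), and run a variant of the swap that additionally reroutes colour $2$ off $u$ via $uu_2\to j$, driving $C(u)\cap C(v)$ toward $\{1\}$ and again completing $G$ unless a $(4,i)_{(u_4,v)}$-path blocks the $vu_4\to4$ step. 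I expect this $T_2$ case to be the main obstacle: the simultaneous multi-edge recolouring at the shared vertex $u_4=v_2$ creates new $4$-, $2$-, $j$- and $c(uv)$-coloured edges whose interaction with the $(1,j)_{(u_1,v_1)}$-path must be checked cycle by cycle, and one must confirm that every potential bichromatic cycle is either severed by the choice $c(uv)\in T_1$ (via Lemma~\ref{lemma06}) or else exhibits the desired $(4,i)_{(u_4,v)}$-path.
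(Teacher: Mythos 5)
Your part (1) is exactly the paper's argument: the same recoloring $(vu_4,uu_4,uv)\overset{\divideontimes}\to(4,j,T_1)$ under the same two failure hypotheses, and your verification is essentially sound. (One omission worth noting: besides the $(3,j)$-cycle through $uu_3$, the step $uu_4\to j$ could in principle close a $(1,j)$-cycle through $uu_1$ when $1\in C(u_4)$. This is excluded because $j\in T_4$ forces $j\notin B_2\subseteq C(v_2)=C(u_4)$, hence $j\in B_1$, and the maximal $(1,j)$-path from $u$ through $u_1$ then runs to $v_1$ and terminates at $v$, so by Lemma~\ref{lemma06} it cannot reach $u_4$. The same fact $T_4\subseteq B_1\subseteq C(u_1)$ is what gives $j\notin T_1$, which you need for the colour of $uv$.)

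The genuine gap is the $T_2\setminus C(u_4)\neq\emptyset$ trigger of part (2), and you flag it yourself. The paper derives all three triggers from the single implication of part (1): each condition must produce some $j\in T_4$ admitting no $(3,j)_{(u_4,u_3)}$-path, whereupon the part-(1) recoloring finishes. For $3\notin C(u_4)$ and for $j\in T_4\setminus C(u_3)$ this is immediate (such a path must leave $u_4$ on a $3$-edge and must enter $u_3$ on a $j$-edge, since the only $3$-edge at $u_3$ is $uu_3$ and $u$ carries no $j$-edge). For $j\in T_2\cap T_4$ you do not show this failure; instead you switch to a different recoloring $uu_2\to j$ and leave its acyclicity unchecked. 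That route has a concrete obstruction you never address: after $uu_2\to j$ the only common colour of $u$ and $v$ is $1$, so $uv$ must receive a colour $c$ with no $(1,c)_{(u_1,v_1)}$-path, i.e.\ $c\notin B_1$; but $T_{uv}\subseteq B_1\cup B_2$ gives no guarantee such a $c$ exists, and indeed in this configuration $T_2\subseteq B_1$. So part (2) is only two-thirds proved, and the missing third is precisely the case the paper later relies on (for instance to extract the $(5,4)_{(u_3,u_4)}$-path inside the proof of Lemma~\ref{12T2u4T1u3}).
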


One sees that $t_1 = \Delta - 2 - (d(u_1) - w_1) = \Delta - d(u_1) + w_1 - 2\ge w_1 - 2$,
and $t_2 = \Delta - 2 - (d(u_2) - w_2) = \Delta - d(u_2) + w_2 - 2 \ge w_2 - 2$.

\begin{lemma}
\label{12T2ge1}
$w_2\ge 3$ and $t_2 = \Delta - d(u_2) + w_2 - 2 \ge w_2 - 2 \ge 1$.
\end{lemma}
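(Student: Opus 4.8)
The plan is to argue by contradiction. Note first that the equality $t_2 = \Delta - d(u_2) + w_2 - 2$ is exactly the count established just above the lemma, and $d(u_2)\le\Delta$ gives $t_2\ge w_2-2$; hence the only real content is $w_2\ge 3$, and once this is proved the whole statement follows. So I assume $w_2\le 2$ and aim for a contradiction. Since $c(uu_2)=2$ we always have $2\in W_2$, so the assumption reads $W_2=\{2\}$ or $W_2=\{2,x\}$ for a single color $x\in C_{uv}\setminus\{2\}$. Recall that in this subcase $c(vu_3)=5$ and $c(vu_4)=2$ identify $u_3$ with $v_5$ and $u_4$ with $v_2$, so $2\in C(u_4)$, and that $T_1\subseteq B_2\subseteq C(u_2)\cap C(u_4)$ while $T_2\subseteq B_1\subseteq C(u_1)$.

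First I would determine which single color $x$ can lie in $W_2\setminus\{2\}$. The membership of $5$ is governed by \ref{125notu2514} and \ref{125notinu1}: if $5\notin C(u_2)$ and $H$ has no $(5,i)_{(u_2,u_i)}$-path for $i\in\{1,4\}$, then \ref{125notu2514} forces $T_1\subseteq T_{uv}\setminus B_1\subseteq C(u_3)$ together with $T_3\neq\emptyset$ and a blocking path at $u_3$, information I will feed into the counting; otherwise $5\in W_2$, i.e.\ $x=5$. The membership of $1$ at the neighbours is governed by \ref{122notinu12323}(2) (applicable when $2\notin C(u_1)$), and the colors $3,4,6,7$ at $u_3=v_5$ and $u_4=v_2$ are controlled by \ref{12T43167neemptyset}, \ref{124i4i3j3j}, and \ref{123v13v2}. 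Running these propositions against $w_2\le 2$ should cut the analysis down to the short list of shapes $W_2\in\{\{2\},\{2,1\},\{2,5\}\}$.

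For each surviving shape the argument follows the standard dichotomy of this section. I would first try the direct recolorings along the edges $uu_i$ and $vu_j$, that is Kempe swaps along the relevant bichromatic paths; by Lemma~\ref{lemma06} each such attempt either produces an acyclic edge $(\Delta+5)$-coloring of $G$ (the $\overset{\divideontimes}\to$ outcome) or reduces to a coloring with $a_{uv}\le 1$. When every recoloring is blocked, the blocking paths force specific colors of $C_{uv}$ to appear in $C(u_1),C(u_3),C(u_4)$, and force $\mathrm{mult}_{S_u}(j)\ge 2$ for the transfer colors, with Lemma~\ref{lemma06} preventing a single color from being shared through the same endpoint. Summing these via Eq.~(\ref{eq17}), namely $\sum_{i\in[1,4]}d(u_i)=\sum_i w_i+\sum_{j\in T_{uv}}\mathrm{mult}_{S_u}(j)\ge\sum_i w_i+2t_{uv}+t_0$, and exploiting the small value $w_2\le 2$, I expect to reach $\sum_{x\in N(u)\setminus\{v\}}d(x)\ge 2\Delta+14$, which contradicts the bound $\le 2\Delta+13$ for $\Delta\in\{6,7\}$ (and $\le\Delta+20$ for $\Delta\ge 8$) of Eq.~(\ref{eq16}) for configurations $A_7$ and $A_8$.

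The hardest case will be $x=5$, i.e.\ $W_2=\{2,5\}$, which is the least constrained: once $5$ is used at $u_2$ the $(5,\cdot)$-path arguments of \ref{125notu2514} no longer apply, so I must instead extract from \ref{12T43167neemptyset}(2) and \ref{124i4i3j3j} at $u_4=v_2$, combined with $T_1\subseteq C(u_2)\cap C(u_4)$, that $6$ and $7$ still reappear at $u_3$ and $u_4$ with sufficient multiplicity. Balancing the three ``extra'' colors $5,6,7$ against the single free slot of $W_2$, while avoiding any double-counting in the multiset $S_u$, is the delicate bookkeeping; the remainder is routine path-swapping and degree counting.
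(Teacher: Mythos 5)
Your overall strategy --- assume $w_2\le 2$, enumerate the possible shapes of $W_2=C(u_2)\cap C_{uv}$, and in each case either find a successful recoloring or force enough repeated colors into $S_u$ to push $\sum_{i\in[1,4]}d(u_i)$ past the bound of Eq.~(\ref{eq16}) --- is the right skeleton and is what the paper does. You are also right that $t_2=\Delta-d(u_2)+w_2-2\ge w_2-2$ is immediate from $d(u_2)\le\Delta$, so the entire content is $w_2\ge 3$.

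The gap is in your claimed reduction of the case analysis. You assert that \ref{125notinu1}, \ref{125notu2514}, \ref{122notinu12323}, \ref{12T43167neemptyset}, \ref{124i4i3j3j} and \ref{123v13v2} cut the shapes down to $W_2\in\{\{2\},\{2,1\},\{2,5\}\}$. None of these propositions excludes $3$ or $4$ from $C(u_2)$; they constrain $u_3=v_5$, $u_4=v_2$ and $u_1$, not the second slot of $W_2$. On the contrary, some $a_2\in[3,4]\cap C(u_2)$ is the \emph{generic} situation, because by Corollary~\ref{auv2}(1.2) an $(a_2,j)_{(u_2,u_{a_2})}$-path with $a_2\in\{3,4\}$ is exactly the mechanism that blocks the colors of $T_2$; so $W_2=\{2,3\}$ and $W_2=\{2,4\}$ cannot be discarded, and they are precisely the branches where the adversarial structure is richest. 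In the paper's proof these two shapes occupy the bulk of the argument (each splits further according to whether $H$ contains a $(3,4)_{(u_2,u_3)}$-path, respectively a $(4,5)_{(u_2,u_4)}$-path, and requires several rounds of forced memberships before the count $\sum_{i\in[1,4]}w_i+2t_{uv}+t_0$ reaches $2\Delta+14$ or a terminal recoloring appears), whereas $W_2=\{2,5\}$ --- the case you single out as hardest --- is dispatched comparatively quickly. The shapes that really are cheap are those with $W_2\cap\{1,3,4,5\}=\emptyset$, i.e.\ $\{2\},\{2,6\},\{2,7\}$: there one shows $T_{uv}\subseteq C(u_3)\cup C(u_4)$, $T_1\subseteq C(u_3)\cap C(u_4)$, $6,7\in C(u_1)\cup C(u_3)\cup C(u_4)$ and $|W_1\cap[1,5]|+|W_3\cap[1,5]|+|W_4\cap[1,5]|\ge 10$ simultaneously, which already yields $d(u_3)+d(u_4)\ge\Delta+8$. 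As written, your proof omits the two hardest branches, so it does not close.
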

\begin{proof}
Assume that $w_2\le 2$.
One sees from \ref{125notinu1}(1) that $5/2/4\in C(u_1)$,
and if $5\not\in C(u_1)$ and $H$ contains no $(5, 2)_{(u_1, u_2)}$-path,
then $H$ contains a $(5, 4)_{(u_1, u_4)}$-path.

Condition~(C1) states that $|W_3\cap [1, 5]| + |W_4\cap [1, 5]| + |W_1\cap [1, 5]|\ge 10$,
$6, 7\in C(u_1)\cup C(u_3)\cup C(u_4)$,
and $T_1\subseteq C(u_3)\cap C(u_4)$, $T_{uv}\subseteq C(u_3)\cup C(u_4)$.
Assume that (C1) holds.
Let $\kappa_{u_1} = |[6, 7]\cap C(u_1)|$.
One sees that $t_1\ge w_1 - 2\ge \kappa_{u_1} + |W_1\cap [1, 5]| - 2$.
Thus, $d(u_3) + d(u_4)\ge |W_3\cap [1, 5]| + |W_4\cap [1, 5]| + 2 - \kappa_{u_1} + t_{uv} + t_1\ge |W_3\cap [1, 5]| + |W_4\cap [1, 5]| + 2 - \kappa_{u_1} + t_{uv} + \kappa_{u_1} + |W_1\cap [1, 5]| - 2 \ge \Delta + 10 - 2 = \Delta + 8$,
and none of ($A_{8.1}$)--($A_{8.4}$) holds.
Hence, we proceed with the following proposition:
\begin{proposition}
\label{12Delta+8}
If (C1) holds, then we are done.
\end{proposition}

When $1, 3, 4, 5\not\in C(u_2)$,
it follows from \ref{125notinu1}(1) that $5/4\in C(u_1)$,
from \ref{122notinu12323} that $2/3\in C(u_1)$,
and from \ref{125notu2514}(1) that $T_1\subseteq C(u_3)$.
One sees that $T_4\subseteq B_1$.
It follows from \ref{12T4123}(3) that $3\in C(u_4)$,
$T_4\subseteq C(u_3)$, i.e., $T_{uv}\subseteq C(u_3)\cup C(u_4)$,
from \ref{125u4u1oru4u2}(3) that $5/1\in C(u_4)$,
from \ref{1212notu3}(2) that $2/1\in C(u_3)$,
and from \ref{123notinu2u1}(1) that $6, 7\in C(u_1)\cup C(u_3)\cup C(u_4)$.
One sees that (C1) holds.
Thus, by \ref{12Delta+8}, we are done.

In the other case, $5/1/3/4\in C(u_2)$.
We consider below the four cases, respectively.

{\bf Case 1.} $W_2 = \{2, 5\}$.
It follows from \ref{122notinu12323}(2) that $2/3\in C(u_1)$,
from \ref{125notinu1}(1) that $5/4/2\in C(u_1)$,
from \ref{12T3124} that $T_1\subseteq C(u_3)$,
and from \ref{12T4123}(3) that $3\in C(u_4)$,
$T_4\subseteq C(u_3)$, i.e., $T_{uv}\subseteq C(u_3)\cup C(u_4)$.
It follows from \ref{125u4u1oru4u2}(3) that $5/1\in C(u_4)$,
from \ref{1212notu3}(2) that $2/1\in C(u_3)$,
and from \ref{123notinu2u1}(1) that $6, 7\in C(u_1)\cup C(u_3)\cup C(u_4)$.
It follows from \ref{12Delta+8} that mult$_{C(u_3)}(1) + $mult$_{C(u_3)}(2)= 1$, $4\not\in C(u_3)$,
$[2, 5]\cap C(u_1) = \{2\}$,
and then from \ref{1212notu3}(1) that $1\in C(u_3)$.
Then $2\not\in C(u_3)$, and $(u u_1, u u_2, u u_3, u v)\overset{\divideontimes}\to (3, 1, 2, T_4)$.

{\bf Case 2.} $W_2 = \{2, 1\}$.
It follows from \ref{125notinu1}(1) that $5/4\in C(u_1)$,
from \ref{122notinu12323}(1) that $2/3\in C(u_1)$,
from \ref{12T3124} that $T_1\subseteq C(u_3)$,
and from \ref{12T4123}(3) that $3\in C(u_4)$,
$T_4\subseteq C(u_3)$, i.e., $T_{uv}\subseteq C(u_3)\cup C(u_4)$.
It follows from \ref{125u4u1oru4u2}(5) that $5/1\in C(u_4)$,
from \ref{12T4123}(3) that $6, 7\in C(u_1)\cup C(u_3)\cup C(u_4)$.
If $1/2\in C(u_3)$, then (C1) holds and by \ref{12Delta+8}, we are done.
Otherwise, $1, 2\not\in C(u_3)$.
It follows from \ref{1212notu3}(2) that $H$ contains a $(1, 3)_{(u_1, u_2)}$-path.

If there exists a $j\in [5, 7]\setminus C(u_1)$,
then $(u u_2, u u_3, u v)\overset{\divideontimes}\to (j, 2, 3)$.
Otherwise, $[5, 7]\subseteq C(u_1)$ and $t_1\ge 3$.
Since $d(u_3) + d(u_4)\ge  |\{3, 5\}| + |\{2, 4, 3, 5/1\}| + t_{uv} + t_1 = \Delta + 4 + t_1$,
we have $t_1 = 3$, $W_3 = \{3, 5\}$ and $W_1 = \{1, 3\}\cup [5, 7]$.
Then by \ref{122notinu12323}(1), we are done.


{\bf Case 3.} $W_2 = \{2, 3\}$.
It follows from \ref{125notinu1}(1) that $5/4\in C(u_1)$,
from \ref{122notinu12323}(1) that $2/3\in C(u_1)$,
from \ref{12T43167neemptyset}(2) that $5/6/7\in C(u_4)$,
and from \ref{125notu2514}(1) that $T_1\subseteq T_{uv}\setminus B_1\subseteq C(u_3)$,
from \ref{125notu2514}(2) that $2/6/7\in C(u_3)$.
We distinguish whether or not $H$ contains a $(3, 4)_{(u_2, u_3)}$-path.

{Case 3.1.} $H$ contains no $(3, 4)_{(u_2, u_3)}$-path.
It follows from \ref{12T4123}(3) that $3\in C(u_4)$,
$T_4\subseteq C(u_3)$, i.e., $T_{u v}\subseteq C(u_3)\cup C(u_4)$,
from \ref{125u4u1oru4u2}(5) that $5/1\in C(u_4)$,
from \ref{12T4123}(2) that $6, 7\in C(u_1)\cup C(u_3)\cup C(u_4)$,
and if $1\not\in C(u_4)$, then $6, 7\in C(u_3)\cup C(u_4)$.
One sees that $|\{3, 5\}| + |[2, 4]| + t_{uv} + t_1 = \Delta + 3 + t_1$.
If $1\not\in C(u_4)$, one sees that $5\in C(u_4)$, $6, 7\in C(u_3)\cup C(u_4)$,
and $d(u_3) + d(u_4)\ge \Delta + 3 + t_1 + |[5, 7]| = \Delta + 6 + t_1$,
then $t_1 = 1$, $1, 2\not\in C(u_3)$, $W_1 = \{1, 2, 4/5\}$,
and by Corollary~\ref{auv2d(u)5}(7.2), we are done.
Otherwise, $1\in C(u_4)$,
and $d(u_3) + d(u_4)\ge \Delta + 3 + t_1 + |\{2/6/7\}| + |\{1, 5/6/7\}| = \Delta + 6 + t_1$.
Then $t_1 = 1$, $6, 7\not\in C(u_1)$, $1, 2\not\in C(u_3)$, and $5\not\in C(u_4)$.
Hence, $W_1 = \{1, 2, 5\}$, a contradiction.

{Case 3.2.} $H$ contains a $(3, 4)_{(u_2, u_3)}$-path and $4\in C(u_3)$.
When there exists a $8\in T_{uv}\setminus C(u_3)\cup C(u_4)$,
then assume w.l.o.g. $H$ contains a $(6, 8)_{(u_3, v_6)}$-path, a $(7, 8)_{(u_4, v_7)}$-path,
and $6\in C(u_3)$, $7\in C(u_4)$.
It follows from \ref{12T4123}(3) that $4\in B_1$.
If $3\not\in B_1$, then $(u u_3, u u_2, u v)\overset{\divideontimes}\to (8, 5, 3)$.
Otherwise, $3\in B_1$ and then $3, 4\in C(u_1)$.
If $7\not\in C(u_1)\cup C(u_3)$,
then $(u u_2, u u_3, u u_4)\to (4, 7, 8)$ reduces the proof to (2.4.3.).
Otherwise, $7\in C(u_1)\cup C(u_3)$.
Since $[3, 6]\subseteq C(u_3)$, we have $t_3\ge 2$ and assume $9\in T_3$.
One sees from $T_2\subseteq C(u_3)$ and $T_{u v}\setminus B_1\subseteq C(u_3)$ that $9\in C(u_2)\cap B_1$.
Assume that $6\not\in C(u_1)\cup C(u_4)$ and $9\not\in C(u_4)$.
Let $(u u_1, u u_2)\to (6, 5)$.
If $H$ contains no $(6, 9)_{(u_1, v_6)}$-path,
then $(u u_3, u v)\overset{\divideontimes}\to (8, 9)$.
Otherwise, $H$ contains a $(6, 9)_{(u_1, v_6)}$-path and $(u u_3, u v)\overset{\divideontimes}\to (9, 8)$.
Hence, $6\in C(u_1)\cup C(u_4)$ or $9\in C(u_1)\cap C(u_2)\cap C(u_4)$.
Then $\sum_{x\in N(u)\setminus \{v\}}d(x)\ge |\{1, 3, 4\}| + |[2, 3]| +  |\{3, 5, 4, 6\}| + |\{2, 4, 7\}| + |\{2, 5, 7, 6/9\}| + 2t_{u v} + t_0 \ge 2\Delta + 12 + t_1\ge 2\Delta + 12 + 1 = 2\Delta + 13$.
It follows that $t_1 = 1$, $1\not\in S_u$, and $3\not\in C(u_4)$.
Then $(u u_1, u u_2, u u_3, u v)\overset{\divideontimes}\to (2, 5, 1, 3)$.

In the other case, $T_{u v}\subseteq C(u_3)\cup C(u_4)$.
One sees that $d(u_3) + d(u_4)\ge |\{3, 5, 4, 2/6/7\}| + |\{2, 4, 5/6/7\}| + t_{u v} + t_1 = \Delta + 5 + t_1$.
It follows that $t_1\le 2$ and $1\not\in C(u_3)\cap C(u_4)$.
\begin{itemize}
\parskip=0pt
\item
     $1\in C(u_3)\cup C(u_4)$.
     Then $t_1 = 1$, $3\not\in C(u_4)$,  $\sum_{i\in \{2, 6, 7\}}$ mult$_{C(u_3)}(i) = 1$,
     $\sum_{i\in [5, 7]}$ mult$_{C(u_4)}(i) = 1$,
     and $W_1 = \{1, 2/3, 4/5\}$.
     When $1\in C(u_3)\setminus C(u_4)$,
     it follows from Corollary~\ref{auv2d(u)5}(7.2) that $3\in C(u_1)$ and then $2\in C(u_3)$.
     If $4\not\in C(u_1)$, then $(u u_3, u u_1)\to ([6, 7]\setminus C(u_4), T_1)$ reduces the proof to (2.4.2.).
     Otherwise, $W_1 = \{1, 3, 4\}$ and $5\in C(u_4)$.
     Then $v u_4\to 3$ reduces the proof to (2.4.1.).
     In the other case, $1\in C(u_4)\setminus C(u_3)$.
     If $H$ contains no $(3, j)_{(u_1, u_4)}$-path,
     then $(u u_1, u u_3, u u_4, u v)\overset{\divideontimes}\to (j, 1, 3, T_4)$.
     Otherwise, $H$ contains a $(3, T_1)_{(u_1, u_4)}$-path and then contains a $(4, T_1)_{(u_1, u_4)}$-path.
     Hence, $W_1 = \{1, 3, 4\}$, $2\in C(u_3)$, $5\in C(u_4)$, $6, 7\not\in C(u_3)\cup C(u_4)$,
     and from \ref{12T3124} that $3\in B_1$.
     Thus, $v u_4\to 3$ reduces the proof to (2.4.1.).
\item
    $1\not\in C(u_3)\cup C(u_4)$.
    It follows from Corollary~\ref{auv2d(u)5}(7.2) that $3, 4\in C(u_1)$.
    When $2\in C(u_1)$, then $t_1 = 2$, $5\in C(u_4)$, $W_4 = \{2, 4, 5\}$,
    and $v u_4\to 3$ reduces the proof to (2.4.1.).
    Hence, $2\not\in C(u_1)$.
    If $3\not\in C(u_4)$, then $(u u_1, u u_2, u u_3, u v)\overset{\divideontimes}\to (2, 5, 1, 3)$.
    Otherwise, $3\in C(u_4)$.
    Then $t_1 = 1$, $W_1 = \{1, 3, 4\}$.
    If $H$ contains no $(5, j)_{(u_2, u_3)}$-path for some $j\in T_4$,
    then $(u u_1, u u_2, u u_3, u v)\overset{\divideontimes}\to (2, 5, 1, j)$.
    Otherwise, $H$ contains a $(5, T_4)_{(u_2, u_3)}$-path.
    Then $(u u_1, u u_4, u v)\overset{\divideontimes}\to (5, 1, T_4)$.
\end{itemize}

{\bf Case 4.} $W_2 = \{2, 4\}$.
If $H$ contains no $(4, 5)_{(u_2, u_4)}$-path, then $u u_2\to 5$ reduces the proof to the above Case 3.
Otherwise, $H$ contains a $(4, 5)_{(u_2, u_4)}$-path, $5\in C(u_4)$,
and follows from \ref{125notinu1}(1) that $5\in C(u_1)$.
When $W_1 = \{1, 5\}$,
it follows from \ref{122notinu12323}(2) that $H$ contains a $(4, 1)_{(u_2, u_4)}$-path, $1\in C(u_4)$,
and from \ref{1212notu3}(1) that $1\in C(u_3)$.
It follows from \ref{12T4123}(3) that $3\in C(u_4)$, $T_4\subseteq C(u_3)$ and $6, 7\in C(u_3)\cup C(u_4)$.
Thus, $d(u_3) + d(u_4)\ge |\{1, 3, 5\}| + |[1, 5]| + |[6, 7]| + t_{u v} = \Delta + 8$.

In the other case, $W_1\ge 3$ and $t_1\ge 1$.
One sees that $H$ contains a $(3/4, T_1)_{(u_1, u)}$-path.

{Case 4.1.} $3\not\in C(u_1)$.
Then $4\in C(u_1)$ and follows from Corollary~\ref{auv2}(5.1) that $1, 6, 7\in S_u$,
and if $1\not\in C(u_4)$, then $H$ contains a $(1, 3)_{(u_3, u_4)}$-path,
from Corollary~\ref{auv2}(5.1) and (4) that $T_1\subseteq C(u_1)$.
\begin{itemize}
\parskip=0pt
\item
     There exists a $8\in T_{u v}\setminus (C(u_3)\cup C(u_4))$.
     It follows from \ref{12T3124}(3) that $2\in C(u_3)$,
     from \ref{12T3124}(2) that $H$ contains a $(4, 3)_{(u_2, u_4)}$-path, $3\in C(u_4)$,
     and from \ref{12T43167neemptyset}(2) that $H$ contains a $(6, 8)_{(u_4, v_6)}$-path, $6\in C(u_4)$.
     If $6\not\in C(u_1)\cup C(u_3)$, one sees from \ref{12T3124}(3) that $H$ contains a $(4, 6)_{(u_3, u_4)}$-path, then $(u u_1, u v)\overset{\divideontimes}\to (6, 8)$.
     Otherwise, $6\in C(u_1)\cup C(u_3)$.
     Then $\sum_{x\in N(u)\setminus \{v\}}d(x)\ge |\{1, 4, 5\}| + |\{2, 4\}| + |\{3, 5, 2\}| + |[2, 6]| + |\{1, 6, 7\}| + 2t_{u v} + t_0\ge 2\Delta 2\Delta + 12 + t_1\ge 2\Delta + 13$.
     It follows that $t_1 = 1$, $4\not\in C(u_3)$, $W_1 = \{1, 4, 5\}$,
     and from \ref{122notinu12323}(2) that $1\in C(u_4)\setminus C(u_3)$.
     Then $(u u_1, u u_3, u v)\overset{\divideontimes}\to (3, 1, 8)$.
\item
    $T_{u v}\subseteq C(u_3)\cup C(u_4)$.
    Let $\kappa_{u_1} = |[6, 7]\cap C(u_1)|$.
    One sees that $t_1\ge w_1 - 2\ge \kappa_{u_1} + 3 - 2 = \kappa_{u_1} + 1$.
    Then $d(u_3) + d(u_4)\ge |\{3, 5\}| + |\{2, 4, 5\}| + |\{1\}| + 2 - \kappa_{u_1} + t_{u v} + t_1\ge 8 - \kappa_{u_1} + t_{u v} + \kappa_{u_1} + 1 = \Delta + 7$.
    It follows that $3\not\in C(u_4)$ and then $1\in C(u_4)\setminus C(u_3)$.
    Then $(u u_1, u u_3, u u_4, u v)\overset{\divideontimes}\to (T_1, 1, 3, T_4)$.
\end{itemize}

{Case 4.2.} $3\in C(u_1)$.
When $4\not\in B_1$, it follows from \ref{12T4123}(3) that $3\in C(u_4)$, $T_4\subseteq C(u_3)$,
and $6, 7\in C(u_3)\cup C(u_4)\cup C(u_1)$.
Let $\kappa_{u_1} = |[6, 7]\cap C(u_1)|$.
One sees that $t_1\ge w_1 - 2\ge \kappa_{u_1} + 3 - 2 = \kappa_{u_1} + 1$.
Then $d(u_3) + d(u_4)\ge |\{3, 5\}| + |[2, 5]| + |\{1\}| + 2 - \kappa_{u_1} + t_{u v} + t_1\ge 9 - \kappa_{u_1} + t_{u v} + \kappa_{u_1} + 1 = \Delta + 8$.

In the other case, $4\in B_1\subseteq C(u_1)$ and then $t_1\ge 2$.
One sees that $T_4\subseteq C(u_1)\cap C(u_2)$.

(4.2.1.) $T_1\setminus C(u_3)\ne \emptyset$.
It follows from Corollary~\ref{auv2}(5.1) that $1, 6, 7\in S_u$,
from Corollary~\ref{auv2}(5.2) and (3) that $3\in B_1$ and $H$ contains a $(4, 3)_{(u_2, u_4)}$-path, $3\in C(u_4)$.
If $1\in C(u_4)$, ones sees that $d(u_4)\ge |[1, 5]| + t_1\ge 7$,
it follows that $t_1 = 2$, $W_4 = [1, 5]$, $W_1 = \{1, 3, 4, 5\}$,
and from Lemma~\ref{auvge2}(1.2) that $T_4\subseteq C(u_3)$,
then $C(u_3) = \{3, 5, 6, 7\}\cup T_4$,
and $(u u_1, u u_3, u u_4, u v)\overset{\divideontimes}\to (T_1, 4, 6, T_4)$.
Otherwise, $1\not\in C(u_4)$.
It follows from \ref{122notinu12323}(2) that $2\in C(u_1)$ or $H$ contains a $(2, 3)_{(u_1, u_3)}$-path.
One sees that $\sum_{x\in N(u)\setminus \{v\}}d(x)\ge |\{1, 3, 4, 5\}| + |\{2, 4\}| + |\{3, 5, 1\}| + |[2, 5]| + |\{6, 7, 2\}| + 2t_{u v} + t_0 = 2\Delta + 12 + t_0$.
Since $t_4\ge 2$, we have $T_4\setminus C(u_3)\ne \emptyset$.
Assume w.l.o.g. $8\not\in C(u_3)\cup C(u_4)$ and $H$ contains a $(6, 8)_{(u_4, v_6)}$-path.
ones sees that $d(u_4)\ge |[2, 6]| + t_1\ge 7$,
it follows that $t_1 = 2$, $C(u_4) = [2, 6]\cup T_1$, $T_{uv} = T_1\cup T_4$,
$W_1 = \{1, 3, 4, 5\}$, and $1, 2, 7\in C(u_3)$.
If $H$ contains no $(4, 6)_{(u_4, v_6)}$-path,
then $(v u_4, u u_1, u u_2, u u_4, u v)\overset{\divideontimes}\to (4, T_1\cap T_3, 1, 2, T_4)$.
Otherwise, $H$ contains a $(4, 6)_{(u_4, v_6)}$-path and $4\in C(v_6)$.
If $6\not\in C(u_3)$, then $(u u_1, u v)\overset{\divideontimes}\to (6, 8)$.
Otherwise, $6\in C(u_3)$,
and $\sum_{x\in N(u)\setminus \{v\}}d(x)\ge 2\Delta + 12 + t_0 + |\{6\}|\ge 2\Delta + 13$.
It follows that $t_0 = 0$ and $H$ contains a $(6, T_4)_{(u_4, v_6)}$-path.
One sees that the same argument applies if $u u_2\to 6$, and thus $T_1\subseteq C(v_6)$.
Hence, $C(v_1) = \{4, 6\}\cup T_{uv}$.
Then $(v v_6, u v)\overset{\divideontimes}\to (3, 6)$.

(4.2.2.) $T_1\subseteq C(u_3)$.
One sees that $|\{1, 3, 4, 5\}| + |\{2, 4\}| + |\{3, 5\}| + |\{2, 4, 5\}| + 2t_{u v} + t_0 = 2\Delta + 7 + t_0$.
We distinguish whether or not $T_4\setminus C(u_3) \ne \emptyset$.
\begin{itemize}
\parskip=0pt
\item
    $T_4\setminus C(u_3)\ne \emptyset$.
    Assume w.l.o.g. $8\not\in C(u_3)\cup C(u_4)$ and $H$ contains a $(6, 8)_{(u_4, v_6)}$-path.
    It follows from \ref{12T3124}(2) that $3\in C(u_4)$.
    Ones sees that $d(u_4)\ge |[2, 6]| + t_1\ge 7$.
    It follows that $t_1 = 2$, $C(u_4) = [2, 6]\cup T_1$, $T_{u v} = T_1\cup T_4$,
    $W_1 = \{1, 3, 4, 5\}$,
    and from \ref{122notinu12323}(2) that $2\in C(u_3)$.
    Assume that $6\not\in C(u_3)$.
    If $H$ contains no $(4, 6)_{(u_1, u_4)}$-path, then $(u u_1, u v)\overset{\divideontimes}\to (6, 8)$.
    Otherwise, $H$ contains a $(4, 6)_{(u_1, u_4)}$-path and $(u u_1, u u_2, u u_3, u v)\overset{\divideontimes}\to (2, 1, 6, 8)$.
    Hence, $6\in C(u_3)$,
    and $\sum_{x\in N(u)\setminus \{v\}}d(x)\ge 2\Delta + 7 + t_0 + |\{3, 6, 6, 2\}| = 2\Delta + 11 + t_0\ge 2\Delta + 11 + t_1\ge 2\Delta + 11 + 2 = 2\Delta + 13$.
    It follows that $1, 4, 7\not\in C(u_3)$,
    and from Corollary~\ref{auv2d(u)5}(7.2) that $H$ contains a $(4, T_1)_{(u_1, u_4)}$-path.
    Then $(u u_1, u u_2, u u_3, u u_4, u v)\overset{\divideontimes}\to (T_1, 1, 4, 7, 8)$.
\item
    $T_4\subseteq C(u_3)$.
    One sees from \ref{122notinu12323}(2) that $1\in C(u_4)$ or $2\in C(u_1)\cup C(u_3)$,
    and from \ref{12T43167neemptyset}(1) that $1/3/6/7\in C(u_4)$, $t_4\ge 2$.
    Then $\sum_{x\in N(u)\setminus \{v\}}d(x)\ge 2\Delta + 7 + t_0 + |\{1/2\}|\ge 2\Delta + 8 + t_1 + t_4\ge 2\Delta + 8 +2 + 2 = 2\Delta + 12$.
    Hence, assume w.l.o.g. $6\not\in S_u$,
    and it follows from \ref{12T3124}(3) that $3\in C(u_4)$.
    If $1\not\in C(u_3)$, then $(u u_1, u u_3, u u_4, u v)\overset{\divideontimes}\to (T_1, 1, 6, T_4)$.
    Otherwise, $1\in C(u_3)$ and then $1\not\in C(u_4)$, $2\not\in C(u_1)\cup C(u_3)$.
    Thus, $(u u_1, u u_2, u v)\overset{\divideontimes}\to (2, 1, T_4)$
    \footnote{Or, by \ref{122notinu12323}(2), we are done.  }.
\end{itemize}
Hence, $w_2\ge 3$ and $t_2 = \Delta - d(u_2) + w_2 - 2 \ge w_2 - 2 \ge 1$.
\end{proof}

\begin{lemma}
\label{12T1ge1}
$w_1\ge 3$ and $t_1 = \Delta - d(u_1) + w_1 - 2 \ge w_1 - 2 \ge 1$.
\end{lemma}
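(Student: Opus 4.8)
The plan is to establish $w_1\ge 3$ by contradiction, assuming $w_1\le 2$, in close parallel with the proof of Lemma~\ref{12T2ge1} but now in sub-case (2.5.1.) with the roles of the colours $1$ and $2$ (and of $u_1$ and $u_2$) exchanged; once $w_1\ge 3$ is shown, the identity $t_1=\Delta-d(u_1)+w_1-2$ together with $d(u_1)\le\Delta$ gives $t_1\ge w_1-2\ge 1$ at once. Since $1\in W_1$ and $T_2\cup C_{12}\subseteq B_1\subseteq C(u_1)$, the hypothesis $w_1\le 2$ forces $C(u_1)=\{1\}\cup T_2\cup C_{12}$ or $C(u_1)=\{1,x\}\cup T_2\cup C_{12}$ for a single colour $x\in[2,7]$. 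The new advantage over Lemma~\ref{12T2ge1} is that $w_2\ge 3$ and $t_2\ge 1$ are now available, which tightens every degree count on the $u_2$-side.

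First I would split on the single extra colour $x$, treating $W_1=\{1\}$ and each of $W_1=\{1,2\},\{1,3\},\{1,4\},\{1,5\},\{1,6\},\{1,7\}$ in turn, mirroring the case split on $W_2$ in Lemma~\ref{12T2ge1}. In each case there are the same two exits: either a short Kempe exchange on the edges incident to $u$ yields an acyclic edge $(\Delta+5)$-colouring directly (marked $\overset{\divideontimes}\to$) or reduces the instance to one of the already-settled cases (2.2.x) or (2.4.x); or else I verify the hypotheses of Proposition~\ref{12Delta+8}, that is, condition (C1), and conclude $d(u_3)+d(u_4)\ge\Delta+8$, whence $\sum_{i\in[1,4]}d(u_i)\ge 2\Delta+14$, contradicting the bound $\sum_{i\in[1,4]}d(u_i)\le 2\Delta+13$ that holds whenever $G$ contains ($A_8$). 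Verifying (C1) requires forcing $T_1\subseteq C(u_3)\cap C(u_4)$, $T_{uv}\subseteq C(u_3)\cup C(u_4)$, $6,7\in C(u_1)\cup C(u_3)\cup C(u_4)$, and the count $|W_3\cap[1,5]|+|W_4\cap[1,5]|+|W_1\cap[1,5]|\ge 10$; these follow from repeated use of Corollary~\ref{auv2}(5.1)--(5.2), Corollary~\ref{auv2d(u)5}(7.2), and the path-existence Propositions~\ref{125notu2514}, \ref{12T3124}, \ref{12T4123}, \ref{125u4u1oru4u2}, \ref{126notinu3}, \ref{126notinu4}, \ref{1212notu3}, \ref{1212notu4}, \ref{123notinu2u1}, and \ref{12T43167neemptyset}, applied now with the colour-$2$ versions where Lemma~\ref{12T2ge1} used the colour-$1$ versions.

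The main obstacle, I expect, is the structural asymmetry forced by $c(vu_4)=2$: because $u_4=v_2$ we have only $T_1\subseteq C(u_2)\cap C(u_4)$ while $T_2\subseteq C(u_1)$, so the colour-$1$ side is genuinely not a verbatim mirror of the colour-$2$ analysis and each sub-case needs its own path bookkeeping to pin down $C(u_3)$ and $C(u_4)$. The delicate points will be (i) deriving $T_1\subseteq C(u_3)$ (not merely $T_1\subseteq C(u_2)\cap C(u_4)$), so that (C1) applies, typically through Corollary~\ref{auv2}(5.2) together with a $(4,T_1)_{(u_1,u_4)}$- or $(2,T_1)_{(u_1,u_2)}$-path; and (ii) in the borderline sub-cases where the degree count only reaches $2\Delta+13$, extracting the exact colour forced out of $S_u$ and performing the reduction to (2.2.x) or (2.4.x) rather than the contradiction. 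Throughout, Proposition~\ref{12Delta+8} serves as the uniform endpoint for the large--degree-sum exit, so the remaining labour lies entirely in verifying its hypotheses or producing the colouring in each branch.
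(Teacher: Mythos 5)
Your overall shape is right — assume $w_1\le 2$, split on the single extra colour in $W_1$, and in each branch either recolour/reduce to an earlier case or force $d(u_3)+d(u_4)\ge\Delta+8$ — and the second half of the statement is indeed immediate from $t_1=\Delta-d(u_1)+w_1-2$ once $w_1\ge 3$ is known. But the endpoint you chose does not work here. Condition (C1) and Proposition~\ref{12Delta+8} are calibrated for the proof of Lemma~\ref{12T2ge1}: the count there reaches $\Delta+8$ only because the term $t_1\ge w_1-2=|W_1\cap[1,5]|+\kappa_{u_1}-2$ (the double-count of $T_1\subseteq C(u_3)\cap C(u_4)$) converts $|W_1\cap[1,5]|$ into degree on the $u_3,u_4$ side. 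Under the present hypothesis $w_1\le 2$ that mechanism collapses: $|W_1\cap[1,5]|\le 2$ and $t_1$ may well be $0$, so (C1) would require $|W_3\cap[1,5]|+|W_4\cap[1,5]|\ge 8$, which the configurations arising in this lemma do not supply (typically $W_3\supseteq\{3,5\}$ and $W_4\supseteq\{2,4,\dots\}$ give a sum of $5$ to $7$). Forcing $T_1\subseteq C(u_3)\cap C(u_4)$ — the point you single out as delicate — therefore buys you nothing, since $T_1$ can be empty.

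What is actually needed is the mirror of (C1): the set that must be shown to lie in $C(u_3)\cap C(u_4)$ and be double-counted is $T_2$, whose size $t_2\ge w_2-2\ge 1$ is exactly what the just-proved Lemma~\ref{12T2ge1} guarantees. The paper's proof does precisely this: after using \ref{125notinu1}(1) to cut the case split to $W_1\in\{\{1,5\},\{1,2\},\{1,4\}\}$ (your seven cases include four that are excluded at the outset by that proposition, which you did not list among your tools), each branch pins down enough of $W_3$, $W_4$ and the memberships of $1,5,6,7$ to write $d(u_3)+d(u_4)\ge(\text{forced colours})+t_{uv}+t_2\ge\Delta+7+t_2\ge\Delta+8$, or else terminates in an explicit recolouring such as $(uu_1,uu_2,uu_3,uu_4,uv)\overset{\divideontimes}\to(3,T_2,4,1,T_4)$ or a reduction to case (2.4.1.). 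So the gap is not the general strategy but the counting lemma you appeal to: as stated, Proposition~\ref{12Delta+8} cannot deliver the contradiction in any branch of this proof, and the argument must instead be rebuilt around $T_2$ and $t_2\ge 1$.
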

\begin{proof}
Assume $w_1\le 2$.
One sees from \ref{125notinu1}(1) that $5/2/4\in C(u_1)$.
Then $W_1 = \{1, 5/2/4\}$.
We consider below the three cases, respectively.

%

{\bf Case 1.} $W_1 = \{1, 5/2\}$.
It follows from \ref{12T4123}(3) that $T_2\subseteq C(u_4)$, $3\in C(u_4)$, $T_4\subseteq C(u_3)$,
from \ref{12T4123}(3) and \ref{12T3124}(3) that $6, 7\in C(u_3)\cup C(u_4)$.
One sees that $|\{3, 5\}| + |[2, 4]| + |\{6, 7\}| + t_{u v} + t_2 = \Delta + 5 + t_2$.
When $5\in C(u_1)$, it follows from \ref{1212notu3}(1) and \ref{1212notu4}(1) that $1\in C(u_3)\cup C(u_4)$.
One sees that the same argument applies if $uu_1\to 6/7$.
Hence, $H$ contains a $(\{1, 6, 7\}, T_4)_{(u_1, u)}$-path.
It follows from \ref{12T43167neemptyset}(2) that $5\in C(u_4)$.
Then $d(u_3) + d(u_4)\ge \Delta + 5 + t_2 + |\{1, 5\}| = \Delta + 7 + t_2\ge\Delta + 7 + 1 = \Delta + 8$.

In the other case, $W_2 = \{1, 2\}$.
Then $H$ contains a $(2, 5)_{(u_1, u_2)}$-path.
It follows from \ref{125u4u1oru4u2}(4) that $5\in C(u_4)$.
Then $d(u_3) + d(u_4)\ge \Delta + 5 + t_2 + |\{5\}| = \Delta + 6 + t_2\ge \Delta + 6 +1 = \Delta + 7$.
It follows that $t_2 = 1$, $1\not\in S_u$, $2\not\in C(u_3)$,
and from \ref{1212notu4}(1) that $4\in C(u_2)$.
Hence, $W_2 = \{2, 4, 5\}$ and $H$ contains a $(4, T_2)_{(u_2, u_4)}$-path.
Then $(u u_1, u u_2, u u_3, u u_4, u v)\overset{\divideontimes}\to (3, T_2, 4, 1, T_4)$.

{\bf Case 2.} $W_1 = \{1, 4\}$.
Then $H$ contains a $(4, 5)_{(u_1, u_4)}$-path.
It follows from \ref{122notinu12323}(1) that $5\in C(u_2)$,
from \ref{122notinu12323}(2) that $1\in C(u_2)$, or $H$ contains a $(1, 3/4)_{(u_2, u)}$-path.

{Case 2.1.} $T_2\setminus C(u_3)\ne\emptyset$.
Then $4\in C(u_2)$ and $T_2\subseteq C(u_4)$.
It follows from Corollary~\ref{auv2}(5.1) that $6, 7\in S_u$,
and from \ref{12T4123}(3) that $3\in C(u_2)\cap C(u_4)$.
If $1\not\in C(u_2)\cup C(u_4)$,
one sees that $H$ contains a $(1, 3)_{(u_2,u_3)}$-path,
then $(u u_1, u u_2, u u_4, u v)\overset{\divideontimes}\to (2, T_2\cap T_3, 1, T_4)$.
Otherwise, $1\in C(u_2)\cup C(u_4)$.
Since $d(u_2) + d(u_4)\ge t_{uv} + 2\times |[2, 5]| + |\{1\}| = \Delta + 7$,
it follows that $d(u_2) = \Delta$, $W_1\cup W_4 = [2, 5]\cup \{1\}$, $6, 7\in C(u_3)$, $d(u_4) = 7$,
and then $C(u_1) = \{1, 4\}\cup T_2\cup T_4$,
and from \ref{12T43167neemptyset}(2) that $T_4\subseteq C(u_3)$.
One sees from $d(u_1) = \Delta$ that $\Delta = 7$ and $d(u_3)\le 6$.
Since $d(u_3)\ge |\{3, 5, 6, 7\}\cup T_4| = 4 + t_4$,
we have $t_4 = 2$ and $W_1 = [1, 5]$, $W_4 = [2, 5]$, $W_3 = \{3, 5, 6, 7\}\cup T_4$.
Then $(v u_4, u u_4, u u_2)\to (4, 2, T_2)$ reduces to an acyclic edge $(\Delta + 5)$-coloring for $H$ such that $|C(u)\cap C(v)| = 1$.

{Case 2.2.} $T_2\subseteq C(u_3)$.
When $3\not\in C(u_2)$,
then $4\in C(u_2)$, $T_2\subseteq C(u_4)$,
and $1\in C(u_2)$, or $H$ contains a $(1, 4)_{(u_2, u_4)}$-path.
It follows from Corollary~\ref{auv2}(5.1) that $6, 7\in S_u$,
from \ref{1212notu3}(2) that $2\in C(u_3)$ or $3\in C(u_4)$.
Let $\kappa_{u_2} = |\{1, 6, 7\}\cap C(u_2)|$.
If $T_4\subseteq C(u_3)$,
then $d(u_3) + d(u_4)\ge |\{3, 5\}| + |\{2, 4, 5\}| + |\{2/3\}| + 3 - \kappa_{u_2} + t_{uv} + t_0 \ge \Delta + 7 - \kappa_{u_2} + t_2\ge \Delta + 7 - \kappa_{u_2} + \kappa_{u_2} + 3 - 2 = \Delta + 8$.
Otherwise, $T_4\setminus C(u_3)\ne \emptyset$.
Assume w.l.o.g. $8\not\in C(u_3)\cup C(u_4)$ and $H$ contains a $(6, 8)_{(u_4, v_6)}$-path.
It follows from \ref{12T3124}(3) that $2\in C(u_3)$, $4\in B_1$,
and from \ref{12T3124}(2) that $3\in C(u_4)$.
Recall that $1\in C(u_2)\cup C(u_4)$.
One sees that $d(u_2) + d(u_4)\ge |\{2, 4, 5\}| + |[2, 6]| + |\{1\}| + t_{uv} = \Delta + 7$.
It follows that $d(u_2) = \Delta$, $C(u_1)= \{1, 4\}\cup T_2\cup T_4 = \{1, 4\}\cup T_{uv}$, $d(u_4) = 7$,
and $7\not\in C(u_4)$.
If $H$ contains no $(4, 6)_{(u_1, u_4)}$-path, then $(u u_1, u v)\overset{\divideontimes}\to (6, 8)$.
Otherwise, $H$ contains a $(4, 6)_{(u_1, u_4)}$-path,
and $(v u_4, u u_4)\to (4, 8)$ reduces to an acyclic edge $(\Delta + 5)$-coloring for $H$ such that $|C(u)\cap C(v)| = 1$.

In the other case, $3\in C(u_2)$.
One sees that $|\{1, 4\}| + |\{2, 3, 5\}| + |\{3, 5\}| + |\{2, 4, 5\}| + 2t_{u v} = 2\Delta + 6$.

(2.2.1.) $T_2\setminus C(u_4)\ne \emptyset$.
It follows from Corollary~\ref{auv2}(5.1) that $6, 7\in S_u$,
Corollary~\ref{auv2}(5.2) and (3) that $4\in B_1$ and $4\in C(u_2)\cup C(u_3)$,
and from Corollary~\ref{auv2d(u)5}(7.2) that $2\in C(u_3)$.
If $H$ contains no $(4, i)_{(u_4, v)}$-path for each $i\in [6, 7]$,
then $(v u_4, u u_4)\to (4, T_2\cap T_4)$ reduces to an acyclic edge $(\Delta + 5)$-coloring for $H$ such that $|C(u)\cap C(v)| = 1$.
Otherwise, assume w.l.o.g. $6\in C(u_4)$ and $H$ contains a $(4, 6)_{(u_4, v)}$-path.
One sees that the same argument applies if let $u u_1\to 6$ and thus $H$ contains a $(6, T_4)_{(u_1, v_6)}$-path.
If $1\not\in C(u_2)\cup C(u_3)$,
one sees that $H$ contains a $(1, 4)_{(u_2, u_4)}$-path,
then $(u u_1, u u_2, u u_3, u v)\overset{\divideontimes}\to (2, T_2\cap T_4, 1, T_4)$.
If $6\not\in C(u_2)\cup C(u_3)$,
then $(u u_2, u u_3)\to (T_2\cap T_4, 6)$ reduces the proof to (2.4.3.).
Otherwise, $1, 6\in C(u_2)\cup C(u_3)$.
Hence, $d(u_2) + d(u_3) \ge t_{u v} + 2\times |\{2, 3, 5\}| + |\{1, 4, 6\}| = \Delta + 7$.
It follows that $d(u_2) = \Delta$, $d(u_3) = 7$, and $7\in C(u_4)\setminus (C(u_2)\cup C(u_3))$.
If $3\not\in C(u_4)$, then $(u u_2, u u_3, u u_4)\to (T_2\cap T_4, 7, 3)$ reduces the proof to (2.4.3.).
Otherwise, $3\in C(u_4)$.
Recall that $4\in C(v_1)$.
It follows from \ref{123v13v2}(1) that $R_1\ne \emptyset$.
Assume w.l.o.g. $8\in R_1$.
Then $8\in C(u_2)\cap C(u_4)$ and $C(u_4) = [2, 8]$.
It follows from \ref{126notinu3}(2) that $4\in C(u_3)$, $H$ contains a $(4, 7)_{(u_3, u_4)}$-path.
One sees clearly that $T_4\setminus C(u_3)\ne \emptyset$.
Assume w.l.o.g. $9\in T_3\cap T_4$ and then $H$ contains a $(7, 9)_{(u_4, v_7)}$-path.
Then $(u u_1, u v)\overset{\divideontimes}\to (7, 9)$.

(2.2.2.) $T_2\subseteq C(u_4)$.
Recall that $1\in S_u$.
One sees that the same argument applies if $i\not\in S_u$ and $u u_1\to i$, $i\in [6, 7]$,
and thus $6, 7\in S_u$.
\begin{itemize}
\parskip=0pt
\item
     $4\not\in C(u_2)$.
     It follows from \ref{122notinu12323}(2) that $1\in C(u_2)\cup C(u_3)$,
     from Corollary~\ref{auv2d(u)5}(7.2) that $2\in C(u_3)$.
     Assume that $6\not\in C(u_2)\cup C(u_3)$.
     It follows from \ref{126notinu3}(2) that $H$ contains a $(4, 6)_{(u_3, u_4)}$-path.
     If $H$ contains no $(6, j)_{(u_4, v)}$-path for some $j\in T_4$,
     then $(u u_1, u v)\overset{\divideontimes}\to (6, j)$.
     Otherwise, $H$ contains a $(6, T_4)_{(u_1, v)}$-path.
     Then $(u u_1, u u_2, u v)\overset{\divideontimes}\to (2, 6, T_4)$.
     Hence, $6, 7\in C(u_2)\cup C(u_3)$,
     and $d(u_2) + d(u_3)\ge 2\times |\{2, 3, 5\}| + |\{1, 6, 7\}| + t_{uv} = \Delta + 7$.
     It follows that $4\not\in C(u_2)\cup C(u_3)$,
     and from \ref{12T4123}(3) that $3\in C(u_4)$ and $T_4\subseteq C(u_3)$.
     Let $\kappa_{u_2} = |\{1, 6, 7\}\cap C(u_2)|$.
     One sees that $t_2\ge  \kappa_{u_2} + |\{2, 3, 5\}| - 2 = \kappa_{u_2} + 1$.
     Thus, $d(u_3) + d(u_4)\ge |\{2, 3, 5\}| + |[2, 5]| + 3 - \kappa_{u_2} + t_{uv} + t_0 \ge \Delta + 8 - \kappa_{u_2} + t_2\ge \Delta + 8 - \kappa_{u_2} + \kappa_{u_2} + 1 = \Delta + 9$.
\item
     $4\in C(u_2)$.
     Let $\kappa_{u_2} = |\{1, 6, 7\}\cap C(u_2)|$.
     One sees that $t_2\ge \kappa_{u_2} + |[2, 5]| - 2 = \kappa_{u_2} + 2$.
     If $T_4\subseteq C(u_3)$,
     then $d(u_3) + d(u_4)\ge |\{3, 5\}| + |\{2, 4, 5\}| + 3 - \kappa_{u_2} + t_{u v} + t_0 \ge \Delta + 6 - \kappa_{u_2} + t_2\ge \Delta + 6 - \kappa_{u_2} + \kappa_{u_2} + 2 = \Delta + 8$.
     Otherwise, $T_4\setminus C(u_3)\ne \emptyset$.
     Assume w.l.o.g. $8\not\in C(u_3)\cup C(u_4)$ and $H$ contains a $(6, 8)_{(u_4, v_6)}$-path.
     It follows from \ref{12T4123}(3) that $4\in B_1$.
     If $H$ contains no $(4, 6)_{(u_1, u_4)}$-path,
     then $(u u_1, u v)\overset{\divideontimes}\to (6, 8)$.
     Otherwise, $H$ contains a $(4, 6)_{(u_1, u_4)}$-path.
     If $H$ contains no $(4, 7)_{(u_4, v)}$-path,
     then $(v u_4, u u_4)\to (4, 8)$ reduces to an acyclic edge $(\Delta + 5)$-coloring for $H$ such that $|C(u)\cap C(v)| = 1$.
     Otherwise, $7\in C(u_4)$ and $H$ contains a $(4, 7)_{(u_4, v)}$-path.
     It follows that $t_2 = 2$, $C(u_4) = \{2\}\cup [4, 7]\cup T_2$, and $W_2 = [2, 5]$, $1\in C(u_3)$.
     If $H$ contains no $(7, 8)_{(u_1, v)}$-path,
     then $(u u_1, u v)\overset{\divideontimes}\to (7, 8)$.
     Otherwise, $H$ contains a $(7, 8)_{(u_1, v)}$-path.
     If $7\not\in C(u_3)$, then $(u u_1, u u_2, u v)\overset{\divideontimes}\to (2, 7, 8)$.
     Otherwise, $7\in C(u_3)$.
     Hence, $\sum_{x\in N(u)\setminus \{v\}}d(x)\ge |\{1, 4\}| + |[2, 5]| + |\{3, 5, 1, 7\}| + |\{2\}\cup [4, 7]| + 2t_{u v} + t_0 = 2\Delta + 11 + t_0\ge 2\Delta + 11 + t_2 = 2\Delta + 13$.
     It follows that $6\not\in C(u_2)\cup C(u_3)$.
     Then $(u u_1, u u_2, u v)\overset{\divideontimes}\to (2, 6, 8)$.
\end{itemize}

Hence, $w_1\ge 3$ and $t_1 = \Delta - d(u_1) + w_1 - 2 \ge w_1 - 2 \ge 1$.
\end{proof}

Assume that $H$ contains a $(6, 8)_{(u_4, v_6)}$-path, $8\in T_4$.
If $6\not\in C(u_1)$ and $H$ contains no $(6, i)_{(u_1, u_i)}$-path for each $i\in [2, 4]$,
then $(u u_1, u v)\overset{\divideontimes}\to (6, 8)$.
Otherwise, $6\in C(u_1)$ or $H$ contains a $(6, 2/3/4)_{(u_1, u)}$-path.
If $6\not\in S_u\setminus C(u_4)$ and $2\not\in C(u_1)$, $H$ contains no $(2, 3)_{(u_1, u_3)}$-path,
one sees that $H$ contains a $(4, 6)_{(u_1, u_4)}$-path,
then $(u u_1, u u_2, u v)\overset{\divideontimes}\to (2, 6, 8)$.
Hence, together with \ref{126notinu3}, \ref{12T3124}(3),
we proceed with the following proposition, or otherwise we are done:
\begin{proposition}
\label{12686862}
Assume that $H$ contains a $(6, 8)_{(u_4, v_6)}$-path, $8\in T_4$. Then
\begin{itemize}
\parskip=0pt
\item[{\rm (1)}]
	 $6\in C(u_1)$, or $H$ contains a $(6, 2/3/4)_{(u_1, u)}$-path;
\item[{\rm (2)}]
	 if $6\not\in S_u\setminus C(u_4)$, then
     \begin{itemize}
     \parskip=0pt
     \item[{\rm (2.1)}]
          $4\in C(u_2)\cap C(u_1)$,
          and $H$ contains a $(4, T_2)_{(u_2, u_4)}$-path, a $(4, T_1)_{(u_1, u_4)}$-path;
     \item[{\rm (2.2)}]
          $3\in B_1\cup B_2$, mult$_{S_u}(3)\ge 2$, and for each $i\in[1, 2]$, $3\in C(u_i)$, or $H$ contains a $(3, 4)_{(u_i, u_4)}$-path, $4\in C(u_i)$, $3\in C(u_4)$.
     \item[{\rm (2.3)}]
          $2\in C(u_1)$, or $H$ contains a $(2, 3)_{(u_1, u_3)}$-path.
	 \end{itemize}
\end{itemize}
\end{proposition}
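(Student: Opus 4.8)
The plan is to follow the same recolour-or-done scheme used throughout this subsection: each bullet of the proposition is the negation of a situation in which an explicit recolouring of the edges at $u$ (and $v$) would yield an acyclic edge $(\Delta+5)$-colouring of $G$. So it suffices to assume, in each failure case, that every such recolouring is blocked, and to read off the stated conclusion. Throughout I keep fixed the hypothesised maximal $(6,8)_{(u_4,v_6)}$-path with $8\in T_4$, and recall that in this subcase $C(u)=\{1,2,3,4\}$, $c(vu_4)=2$, $c(vu_3)=5$, and $C(v)=\{1,2,5,6,7\}$.

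For part (1), I would argue by contradiction. Suppose $6\notin C(u_1)$ and $H$ contains no $(6,i)_{(u_1,u_i)}$-path for any $i\in[2,4]$. Then recolouring $uu_1\to 6$ creates no bichromatic cycle through $u$. Afterwards $C(u)\cap C(v)=\{2,6\}$, so colouring $uv\to 8$ can only fail through an $(8,2)$- or $(8,6)$-cycle; since $8\in T_4$ and, by Lemma~\ref{lemma06} applied to the fixed maximal $(6,8)_{(u_4,v_6)}$-path, no competing $(6,8)$-path can reach $v_6$, both of these classes are controlled. Hence $(uu_1,uv)\overset{\divideontimes}\to(6,8)$ settles $G$, a contradiction, so $6\in C(u_1)$ or $H$ contains a $(6,2/3/4)_{(u_1,u)}$-path, which is (1).

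For part (2) I would first use the hypothesis $6\notin S_u\setminus C(u_4)$, i.e. $6\notin C(u_1)\cup C(u_2)\cup C(u_3)$. Combined with (1), the only surviving alternative is a $(6,j)_{(u_1,u_j)}$-path with $j\in\{2,3,4\}$; as $6\notin C(u_2),C(u_3)$ such a path cannot terminate at $u_2$ or $u_3$, forcing $j=4$, i.e. a $(4,6)_{(u_1,u_4)}$-path. With this in hand, (2.3) follows by contradiction: if $2\notin C(u_1)$ and there is no $(2,3)_{(u_1,u_3)}$-path, then $(uu_1,uu_2,uv)\overset{\divideontimes}\to(2,6,8)$. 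For (2.1) and (2.2) I would invoke the already-established Propositions~\ref{126notinu3} and \ref{12T3124}(3) with the choice $j=6$: the hypothesis $6\notin C(u_1)\cup C(u_2)\cup C(u_3)$ together with the absence of a $(6,4)_{(u_4,u_3)}$-path—again via Lemma~\ref{lemma06} and the fixed $(6,8)_{(u_4,v_6)}$-path—verifies their premises, and their conclusions are exactly $4\in C(u_1)\cap C(u_2)$ with the stated $(4,T_1)_{(u_1,u_4)}$- and $(4,T_2)_{(u_2,u_4)}$-paths, and the multiplicity statement $3\in B_1\cup B_2$, $\mathrm{mult}_{S_u}(3)\ge 2$.

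The main obstacle will be the bookkeeping required to certify that each recolouring introduces no bichromatic cycle and that the premises of \ref{126notinu3} and \ref{12T3124}(3) genuinely hold—in particular, ruling out the auxiliary two-coloured paths $(6,4)_{(u_4,u_3)}$, $(6,i)_{(u_i,u_3)}$, and the $(8,2)$- and $(8,6)$-cycles through $uv$. These all hinge on the uniqueness and maximality guaranteed by Lemma~\ref{lemma06} applied to the single fixed $(6,8)_{(u_4,v_6)}$-path and on $8\in T_4$; carefully tracking which colour each intermediate edge carries, rather than any new idea, is where the effort lies.
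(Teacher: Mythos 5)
Your proposal follows the paper's derivation step for step: part (1) via $(uu_1,uv)\overset{\divideontimes}\to(6,8)$, part (2.3) via $(uu_1,uu_2,uv)\overset{\divideontimes}\to(2,6,8)$ after deducing the $(4,6)_{(u_1,u_4)}$-path from $6\notin S_u\setminus C(u_4)$, and parts (2.1)--(2.2) by citing \ref{126notinu3} and \ref{12T3124}(3) with $j=6$ --- exactly the route the paper takes.

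One caveat on your justification of the premises of \ref{126notinu3} and \ref{12T3124}(3): you attribute the non-existence of a $(6,4)_{(u_4,u_3)}$-path to Lemma~\ref{lemma06} applied to the fixed $(6,8)_{(u_4,v_6)}$-path, but that lemma only controls maximal paths in the \emph{same} colour pair and says nothing about $(6,4)$-paths. The paths $(6,i)_{(u_i,u_3)}$ for $i\in[1,2]$ are excluded simply because $6\notin C(u_1)\cup C(u_2)\cup C(u_3)$ and no edge at $u$ is coloured $6$, but the $(6,4)_{(u_4,u_3)}$-path is not excluded this way, since $6\in C(u_4)$ (the $(6,8)$-path forces the edge of colour $6$ at $u_4$). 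The paper's own one-line citation leaves this premise equally unaddressed, so your write-up matches the paper's level of rigour here, but the appeal to Lemma~\ref{lemma06} should be dropped or replaced by a genuine argument.
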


Assume that $H$ contains a $(7, 8)_{(u_3, v_6)}$-path, $8\in T_4$.
If $7\not\in C(u_1)$ and $H$ contains no $(7, i)_{(u_1, u_i)}$-path for each $i\in [2, 4]$,
then $(u u_1, u v)\overset{\divideontimes}\to (7, 8)$.
Otherwise, $7\in C(u_1)$ or $H$ contains a $(7, 2/3/4)_{(u_1, u)}$-path.
If $7\not\in S_u\setminus C(u_3)$ and $2\not\in C(u_1)$, $H$ contains no $(2, 3)_{(u_1, u_3)}$-path,
one sees that $H$ contains a $(3, 7)_{(u_1, u_3)}$-path,
then $(u u_1, u u_2, u v)\overset{\divideontimes}\to (2, 7, 8)$.
Hence, together with \ref{126notinu4}, \ref{12T4123}(3),
we proceed with the following proposition, or otherwise we are done:
\begin{proposition}
\label{12787872}
Assume that $H$ contains a $(7, 8)_{(u_3, v_7)}$-path, $8\in T_4$. Then
\begin{itemize}
\parskip=0pt
\item[{\rm (1)}]
	 $7\in C(u_1)$, or $H$ contains a $(7, 2/3/4)_{(u_1, u)}$-path;
\item[{\rm (2)}]
	 if $7\not\in S_u\setminus C(u_3)$, then
     \begin{itemize}
     \parskip=0pt
     \item[{\rm (2.1)}]
	      $3\in C(u_1)\cap C(u_2)$,
          and $H$ contains a $(3, T_1)_{(u_1, u_3)}$-path, a $(3, T_2)_{(u_2, u_3)}$-path;
     \item[{\rm (2.2)}]
          $4\in B_1$, mult$_{S_u}(4)\ge 2$, and for each $i\in[1, 2]$, $4\in C(u_i)$, or $H$ contains a $(3, 4)_{(u_i, u_3)}$-path, $3\in C(u_i)$, $4\in C(u_3)$.
     \item[{\rm (2.3)}]
           $2\in C(u_1)$, or $H$ contains a $(2, 3)_{(u_1, u_3)}$-path.
	 \end{itemize}
\end{itemize}
\end{proposition}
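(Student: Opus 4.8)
The plan is to prove \ref{12787872} as the exact mirror image of the just-established \ref{12686862}, with the colour $7$ and the vertex $u_3$ playing the roles that $6$ and $u_4$ play there; accordingly, the auxiliary facts \ref{126notinu3} and \ref{12T3124}(3) invoked in that proof get replaced by their counterparts \ref{126notinu4} and \ref{12T4123}(3). Throughout we keep the standing hypothesis that $H$ has a $(7,8)_{(u_3,v_7)}$-path with $8\in T_4\subseteq T_{uv}$, so $8\notin C(u)\cup C(v)$ and the colour $8$ is already blocked against reaching $uv$ through $u_3$. First I would dispose of (1) by the usual recolouring dichotomy: suppose $7\notin C(u_1)$ and $H$ contains no $(7,i)_{(u_1,u_i)}$-path for each $i\in[2,4]$; then recolouring $uu_1\to 7$ introduces no bichromatic cycle through $u_1$ and leaves $8\notin C(u)$, after which $uv\to 8$ is legal because the only $(7,8)$-path out of $u_3$ terminates at $v_7\neq v$. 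Hence $(uu_1,uv)\overset{\divideontimes}\to(7,8)$ finishes $G$, so either we are done or $7\in C(u_1)$ or $H$ has a $(7,2/3/4)_{(u_1,u)}$-path, which is exactly (1).

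For (2) I would work under $7\notin S_u\setminus C(u_3)$, i.e. $7\notin C(u_1)\cup C(u_2)\cup C(u_4)$, so in particular $7\in\bigl(T_4\cup[6,7]\bigr)\setminus C(u_4)$. Applying Lemma \ref{lemma06} to the fixed $(7,8)_{(u_3,v_7)}$-path rules out any $(7,3)_{(u_4,u_3)}$-path and any $(7,i)_{(u_4,u_i)}$-path for $i\in[1,2]$, which is precisely the premise of \ref{12T4123}(3); that proposition then yields $\mathrm{mult}_{S_u}(4)\ge 2$ together with the ``$4\in C(u_i)$ or $(3,4)_{(u_i,u_3)}$-path'' alternative of (2.2), and the $B_1\cup B_2$ membership it provides is sharpened to $4\in B_1$ again through the $u_3$-side of the path. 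Feeding the same no-path information into \ref{126notinu4} forces $3\in C(u_1)\cap C(u_2)$ along with the $(3,T_1)_{(u_1,u_3)}$- and $(3,T_2)_{(u_2,u_3)}$-paths of (2.1). Finally, for (2.3): if $2\notin C(u_1)$ and $H$ has no $(2,3)_{(u_1,u_3)}$-path, then by (2.1) the colour $7$ sits on a $(3,7)_{(u_1,u_3)}$-path, so $(uu_1,uu_2,uv)\overset{\divideontimes}\to(2,7,8)$ recolours without creating a new bichromatic cycle and we are done; otherwise $2\in C(u_1)$ or the $(2,3)_{(u_1,u_3)}$-path exists, giving (2.3).

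The main obstacle will be the bookkeeping that legitimises each $\overset{\divideontimes}$ recolouring, and, prior to that, the verification that the $(7,8)_{(u_3,v_7)}$-path genuinely forbids the competing $(7,\cdot)$-paths at $u_1,u_2,u_4$, so that the hypotheses of \ref{12T4123}(3) and \ref{126notinu4} are actually satisfied rather than merely plausible. Concretely, after each local swap one must check via Lemma \ref{lemma06} that no bichromatic cycle reappears at $u_1$, $u_2$, or along the colour freed at $u$, and that the colour placed on $uv$ (here $8$) is simultaneously absent from $C(u)$ and $C(v)$ in the recoloured graph. Once these path-exclusion facts are pinned down, the remainder is the routine transcription of the $6/u_4$ argument of \ref{12686862} into its $7/u_3$ mirror.
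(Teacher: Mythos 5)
Your proposal matches the paper's own treatment: the paper derives (1) from the recolouring $(uu_1,uv)\overset{\divideontimes}\to(7,8)$, obtains (2.1) and (2.2) by invoking \ref{126notinu4} and \ref{12T4123}(3), and settles (2.3) via $(uu_1,uu_2,uv)\overset{\divideontimes}\to(2,7,8)$, i.e.\ exactly the $7/u_3$ mirror of \ref{12686862} that you describe. The only slip is in your (2.3) step, where the forced $(3,7)_{(u_1,u_3)}$-path comes from part (1) combined with $7\notin C(u_2)\cup C(u_4)$, not from (2.1) (whose paths are coloured with elements of $T_1\subseteq T_{uv}$, which excludes $7$).
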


By \ref{12686862}(2.1),
we proceed with the following proposition, or otherwise we are done:
\begin{proposition}
\label{126T46multge2}
Assume that $T_2\setminus C(u_4)\ne \emptyset$. Then
\begin{itemize}
\parskip=0pt
\item[{\rm (1)}]
	for each $i\in [6, 7]$, if $H$ contains a $(i, j)_{(u_4, v_i)}$-path for some $j\in T_4$,
    then $i\in S_u\setminus C(u_4)$, i.e., mult$_{S_u}(i)\ge 2$.
\item[{\rm (2)}]
	if mult$_{S_u}(6)\le 1$ and mult$_{S_u}(7)\le 1$,
    then $H$ contains no $(i, j)_{(u_4, v)}$-path, $i = 6, 7$, $j\in T_4$,
    and $5\in C(u_4)$, $H$ contains a $(5, T_4)_{(u_3, u_4)}$-path.
\end{itemize}
\end{proposition}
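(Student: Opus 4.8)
The plan is to read both parts off the two path-extraction facts just proved, \ref{12686862} and \ref{auvge2}(1.2), exploiting the two identifications forced by the present case: $v_2 = u_4$ (since $c(vu_4) = 2$) and $v_5 = u_3$ (since $c(vu_3) = 5$). In particular a $(b, j)_{(u_4, v_b)}$-path with $b = 5$ is literally a $(5, j)_{(u_3, u_4)}$-path, which is the object in which the target conclusion of part~(2) is phrased.

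For part~(1), fix $i \in [6, 7]$ and a $(i, j)_{(u_4, v_i)}$-path with $j \in T_4$. Because $j \in T_4$ means $j \notin C(u_4)$, the single edge of this path incident to $u_4$ must be coloured $i$, so $i \in C(u_4)$; hence it suffices to exhibit one further occurrence of $i$, i.e.\ to show $i \in S_u \setminus C(u_4)$, in order to conclude mult$_{S_u}(i) \ge 2$. This is exactly the hypothesis of \ref{12686862} for $i = 6$, and of its colour-$7$ counterpart (obtained from \ref{12686862} by interchanging the two generic colours $6, 7 \in V_{uv}$) for $i = 7$. I would argue by contradiction: if $i \notin S_u \setminus C(u_4)$, then part~(2) of that proposition applies, and its clause~(2.1) produces a $(4, T_2)_{(u_2, u_4)}$-path, which forces $T_2 \subseteq C(u_4)$ and contradicts the standing assumption $T_2 \setminus C(u_4) \ne \emptyset$. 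Therefore $i \in S_u \setminus C(u_4)$, giving mult$_{S_u}(i) \ge 2$.

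For part~(2), assume mult$_{S_u}(6) \le 1$ and mult$_{S_u}(7) \le 1$. The first assertion is immediate from the contrapositive of part~(1): a $(6, j)_{(u_4, v_6)}$- or $(7, j)_{(u_4, v_7)}$-path with $j \in T_4$ would push the corresponding multiplicity up to $2$. For the remaining assertions I would return to $R_2 = T_{uv} \setminus C(v_2) = T_{uv} \setminus C(u_4) = T_4$, nonempty by \ref{12T43167neemptyset}(2), and apply \ref{auvge2}(1.2) with $i = 2$: for every $j \in T_4$ it supplies a routing colour $b_2(j) \in D_2 \setminus \{2, i_j\}$ together with a $(b_2(j), j)_{(u_4, v_{b_2(j)})}$-path, and since $i_j = 1$ here (by \ref{prop3209}(3)) we have $b_2(j) \in [5, 7] \cap C(u_4)$, as recorded in \ref{12T43167neemptyset}(2). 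The values $b_2(j) \in [6,7]$ are excluded by the first assertion, so $b_2(j) = 5$ for every $j \in T_4$; in particular $5 \in C(u_4)$, and translating through $v_5 = u_3$ yields a $(5, j)_{(u_3, u_4)}$-path for every $j \in T_4$, that is, a $(5, T_4)_{(u_3, u_4)}$-path.

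The delicate point is the universal quantifier in part~(2): \ref{12T43167neemptyset}(2) on its own only supplies a single $b_2$ for a single $j$, which is not enough to pin colour $5$ across all of $T_4$. The plan circumvents this by invoking \ref{auvge2}(1.2) in its full ``for each $j \in R_2$'' form and then using part~(1) to rule out $b_2(j) \in [6,7]$ uniformly in $j$, so that $b_2(j)$ is forced to equal $5$ simultaneously for every $j \in T_4$. The only other thing to check carefully is that the colour-$7$ instance of \ref{12686862} used in part~(1) is legitimate, which rests on $6$ and $7$ being interchangeable generic colours of $V_{uv}$; note that \ref{12787872} concerns instead the $u_3$-routing and is not the statement needed here.
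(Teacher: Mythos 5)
Your proof is correct and follows the paper's intended route: the paper's entire justification here is the citation of \ref{12686862}(2.1) (whose conclusion that $H$ contains a $(4, T_2)_{(u_2, u_4)}$-path forces $T_2\subseteq C(u_4)$, contradicting the standing hypothesis), applied symmetrically to the colours $6$ and $7$, which are indeed interchangeable in this configuration. Your reconstruction of part (2) — invoking \ref{auvge2}(1.2) with $R_2=T_4$ and $i_j=1$ so that every $j\in T_4$ is routed by some $b_2(j)\in[5,7]\cap C(u_4)$, and then using part (1) to exclude $b_2(j)\in\{6,7\}$ uniformly — is exactly the implicit argument the paper relies on, so nothing further is needed.
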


Assume that $H$ contains a $(4, i)_{(u_4, v_i)}$-path and a $(i, j)_{(u_4, v_i)}$-path for some $j\in T_4$, $i\in [6, 7]\cap C(u_4)$.
If $i\not\in C(u_1)$, and $H$ contains no $(i, l)_{(u_1, u_l)}$-path for each $l\in [2, 3]$,
then $(u u_1, u v)\overset{\divideontimes}\to (i, j)$.
Hence, we proceed with the following proposition, or otherwise we are done:
\begin{proposition}
\label{1264646j6j6Su}
If $H$ contains a $(4, i)_{(u_4, v_i)}$-path and a $(i, j)_{(u_4, v_i)}$-path for some $j\in T_4$, $i\in [6, 7]\cap C(u_4)$, then $i\in S_u\setminus C(u_4)$,
and if $i\not\in C(u_1)$, then $H$ contains a $(i, 2/3)_{(u_1, u)}$-path.
\end{proposition}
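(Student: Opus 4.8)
The plan is to obtain the entire statement from the single recolouring $(u u_1, u v)\overset{\divideontimes}\to (i, j)$, that is, recolouring $u u_1$ from $1$ to $i$ and assigning the deleted edge $u v$ the colour $j$. First I would treat the ``good'' branch: assume $i\notin C(u_1)$ and that $H$ contains neither a $(i,2)_{(u_1,u)}$-path nor a $(i,3)_{(u_1,u)}$-path, and show this move produces an acyclic edge $(\Delta+5)$-colouring of $G$, so that we are done. Both assertions then follow by contraposition. In the remaining situation $i\in C(u_1)$ or $H$ contains a $(i,2/3)_{(u_1,u)}$-path, which is exactly the conditional second assertion; and since $i\in\{6,7\}$ lies outside $C(u)$, either alternative forces a copy of $i$ into $C(u_1)\cup C(u_2)\cup C(u_3)$. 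Indeed a $(i,2)_{(u_1,u)}$-path, having no $i$-edge available at $u$, must reach $u$ along $u u_2$ and hence carry an $i$-edge at $u_2$, so $i\in C(u_2)$; likewise a $(i,3)$-path gives $i\in C(u_3)$. Together with the premise $i\in C(u_4)$ this yields $\mathrm{mult}_{S_u}(i)\ge 2$, i.e.\ $i\in S_u\setminus C(u_4)$, which is the first assertion.

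The core of the argument is checking that the move creates no bichromatic cycle. Properness is immediate from $i\notin C(u_1)$ and $j\in T_4\subseteq T_{u v}$. After recolouring, $C(u)=\{i,2,3,4,j\}$ and $C(u)\cap C(v)=\{2,i\}$, so the only candidate bichromatic cycles are $\{i,2\},\{i,3\},\{i,4\},\{i,j\}$ through the new edge $u u_1$ and $\{2,j\},\{i,j\}$ through $u v$. The pairs $\{i,2\}$ and $\{i,3\}$ are excluded by the standing no-path hypotheses, and $\{2,j\}$ is excluded because $j\notin C(u_4)$: the $2$-edge at $v$ is $v u_4$, at which the $\{2,j\}$-alternation cannot continue for lack of a $j$-edge at $u_4$. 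The two delicate cases are $\{i,4\}$ and $\{i,j\}$, and this is exactly where the two hypothesised paths are consumed, which is also why the conclusion names only the colours $2$ and $3$ (compare the one-path propositions \ref{12686862} and \ref{12787872}, whose conclusions still mention $4$).

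For these last two cases I would reason at the level of two-coloured components, using that every two-coloured subgraph has maximum degree at most $2$ and, being cycle-free in $H$, is a disjoint union of simple paths, and that $u_1$ carries no $i$-edge. Because $i\notin C(u_1)$, the vertex $u_1$ cannot be an interior vertex of either premise path, nor the endpoint $v_i$ (which carries the $i$-edge $v v_i$); the aim is to pin the relevant component so that $u_1$ is not its remaining endpoint either. For $\{i,j\}$ this is clean: since $j\notin C(u_4)$, Lemma~\ref{lemma06} applies at $u_4$ and identifies the $\{i,j\}$-component of $u_4$ with the maximal $(i,j)_{(u_4,v_i)}$-path, whence $u_1$ lies outside it and adding the edges $u u_1$ and $u v$ only attaches $u_1$ and $v$ through the cut vertex $u$, keeping the subgraph acyclic. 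The main obstacle is the $\{i,4\}$-case, where Lemma~\ref{lemma06} is \emph{not} directly available at $u_4$ (both $4$ and $i$ lie in $C(u_4)$): here I must instead track that $u$ was a leaf of the $\{4,i\}$-subgraph before recolouring, that $u_4$ is therefore an interior vertex of a simple $\{4,i\}$-path whose two ends are $u$ and the terminal leaf of the maximal $(4,i)_{(u_4,v_i)}$-path, and that this terminal leaf is $v_i\ne u_1$, so that $u_1$ again avoids the component and no $\{i,4\}$-cycle closes when $u u_1$ is added. Making this leaf/endpoint bookkeeping airtight, rather than citing the lemma, is the hard part.
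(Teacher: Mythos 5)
Your proposal is correct and takes essentially the same route as the paper, whose entire proof is the single recolouring $(u u_1, u v)\overset{\divideontimes}\to (i, j)$ under the assumption that $i\notin C(u_1)$ and no $(i,l)_{(u_1,u_l)}$-path exists for $l\in[2,3]$, followed by contraposition. One small slip in your $\{i,4\}$ analysis: the terminal leaf of that bicoloured component is $v$ (the path continues from $v_i$ along the $i$-coloured edge $v v_i$, and $v$ has no $4$-edge), not $v_i$ itself — but since $v\ne u_1$ as well, your conclusion that $u_1$ avoids the component is unaffected.
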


\begin{lemma}
\label{12T2u4T1u3}
If $T_2\setminus C(u_4)\ne \emptyset$ and $T_1\setminus C(u_3)\ne \emptyset$,
then we are done.
\end{lemma}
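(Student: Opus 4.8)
The plan is to run the same machine already used for the structurally identical Lemma~\ref{13T1notinCu4T3notinCu2} (Case 2.4.3), adapted to the present labelling of Case (2.5.1), where $\{i_1,i_2\}=\{1,2\}$, $c(vu_3)=5$, $c(vu_4)=2$, and where $T_1\subseteq B_2\subseteq C(u_2)\cap C(u_4)$, $T_2\subseteq B_1\subseteq C(u_1)$. By Lemmas~\ref{12T2ge1} and \ref{12T1ge1} we already have $w_1,w_2\ge 3$ and $t_1,t_2\ge 1$, so the structure is rich enough to start forcing colours. First I would extract witnesses $j_2\in T_2\setminus C(u_4)$ and $j_1\in T_1\setminus C(u_3)$, and feed the two hypotheses into the forced-path propositions collected for this subcase.

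From the hypothesis $T_2\setminus C(u_4)\ne\emptyset$ I would apply \ref{126T46multge2} directly, together with \ref{12686862} and \ref{1264646j6j6Su}, to control the behaviour of the colours $5,6,7$ at $u_4$: either some $(i,j)_{(u_4,v_i)}$-path with $i\in[6,7]$ exists and then $\mathrm{mult}_{S_u}(i)\ge 2$, or $5\in C(u_4)$ carries a $(5,T_4)_{(u_3,u_4)}$-path. Combined with \ref{125notu2514} and \ref{124i4i3j3j} this also pins $5\in S_u\setminus C(u_3)$ and $7\in S_u$. Symmetrically, from $T_1\setminus C(u_3)\ne\emptyset$ I would invoke Corollary~\ref{auv2}(5.2) and (3) to get $4\in C(u_1)$ with a $(4,T_1)_{(u_1,u_4)}$-path and $\mathrm{mult}_{S_u}(4)\ge 2$, Corollary~\ref{auv2}(5.1) to get $3\in C(u_2)$ (or the corresponding path) and $\mathrm{mult}_{S_u}(3)\ge 2$, and Corollary~\ref{auv2d(u)5}(7.1)/(7.2) to force $2\in C(u_1)$ with a $(2,T_1)_{(u_1,u_2)}$-path and $1,3\in S_u$.

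Assembling these forced occurrences, I would bound $\sum_{x\in N(u)\setminus\{v\}}d(x)\ge 6+\sum_{i\in[1,4]}\mathrm{mult}_{S_u}(i)+|\{5,6,7\}|+2t_{uv}+t_0$, which I expect to evaluate to at least $2\Delta+12+t_0$, and then split on $\mathrm{mult}_{S_u}(1)+\mathrm{mult}_{S_u}(3)$. In the generic subcase this quantity is large enough (or some $T_i\cap T_0$ argument gives an extra unit) to drive the sum to $2\Delta+14$, so that none of ($A_{8.1}$)--($A_{8.4}$) holds, a contradiction with Theorem~\ref{thm02}. In the tight subcase the sum equals $2\Delta+13$, which forces $\Delta=7$, $t_0=0$ (hence $T_1\cap C(u_3)=\emptyset$ and $T_2\cap C(u_4)=\emptyset$), and determines $W_1,W_2,W_3,W_4$ up to finitely many choices; for each of these I would exhibit either a direct colouring $(uu_1,uu_2,uu_3,uu_4,uv)\overset{\divideontimes}\to(\cdots)$, or a swap of the form $(uu_1,uu_2)\to(2,1)$ that reduces to $a_{uv}\le 1$ and is closed off by \ref{1331recoloring} and \ref{1331switch}.

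The hard part will be the exhaustive Kempe-chain bookkeeping in the tight subcase. Once the degree sum is forced to its extremal value $2\Delta+13$, I must enumerate how $1,2,3,4,5,6,7$ and the free colours of $T_{uv}$ distribute among $C(u_1),\dots,C(u_4)$, and in each fully determined configuration verify simultaneously that the chosen recolouring introduces no bichromatic cycle and that the fallback reductions apply. As in the neighbouring lemmas, all of \ref{125notinu1}, \ref{122notinu12323}, \ref{12T43167neemptyset}, \ref{12T4123}, \ref{12T3124}, \ref{12686862}, \ref{12787872}, \ref{126T46multge2} and \ref{1264646j6j6Su} are used at once, and the delicate point is keeping the four neighbours' path constraints mutually consistent rather than any single degree estimate.
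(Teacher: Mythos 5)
Your overall strategy is the one the paper actually uses: from $T_1\setminus C(u_3)\ne\emptyset$ and $T_2\setminus C(u_4)\ne\emptyset$ extract the forced paths $4\in C(u_1)$ with a $(4,T_1)_{(u_1,u_4)}$-path and $3\in C(u_2)$ with a $(3,T_2)_{(u_2,u_3)}$-path, pump Corollary~\ref{auv2}(5.1)/(5.2)/(3) and the local propositions of (2.5.1) into a multiplicity count $\sum_{x\in N(u)\setminus\{v\}}d(x)\ge 6+\sigma_{[1,4]}+\sigma_{[5,7]}+2t_{uv}+t_0$, and finish the extremal configurations by explicit recolourings. However, there is a concrete gap in your list of forced facts: you claim Corollary~\ref{auv2d(u)5}(7.1)/(7.2) forces $2\in C(u_1)$ together with a $(2,T_1)_{(u_1,u_2)}$-path. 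Those items do no such thing --- with $(i_1,i_2,i_3,i_4)=(1,2,3,4)$ they only re-deliver $4\in C(u_1)$ and the $(4,T_1)_{(u_1,u_4)}$-path (or $\{1\}\cup V_{uv}\subseteq S_u$). All Corollary~\ref{auv2}(5.1) gives for colour $2$ here is ``$2\in C(u_3)$ or $H$ contains a $(1,2)_{(u_1,u_3)}$-path'', i.e.\ $\mathrm{mult}_{S_u\setminus C(u_4)}(2)\ge 1$, and \ref{122notinu12323} explicitly treats the situation $2\notin C(u_1)$. The paper's proof therefore has to carry the case $2\in C(u_3)\setminus C(u_1)$ (inside its condition~(C2), after first splitting on $\mathrm{mult}_{S_u}(5)$), and it disposes of it not by a degree contradiction but by dedicated recolourings such as $(uu_1,uu_2,uu_3)\to(2,T_2\cap T_4,1)$ that \emph{reduce to earlier cases} (2.3), (2.4.1), (2.4.2). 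Your plan, by baking $2\in C(u_1)$ into $\sigma_{[1,4]}$, both overstates the degree bound and silently deletes these configurations from the tight-case enumeration.

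A secondary point: the closing devices you name, \ref{1331recoloring} and \ref{1331switch}, are propositions proved for the subcase (2.4.3) where $A_{uv}=\{1,3\}$ and concern the swap $(uu_1,uu_3)\to(3,1)$; they are not available verbatim in (2.5.1), where $A_{uv}=\{1,2\}$ and $v_2=u_4$. The correct substitutes here are Corollary~\ref{auv2d(u)5}(2.1)/(2.2) for the swap $(uu_1,uu_2)\to(2,1)$ and, more importantly, the cross-case reductions just mentioned. So the skeleton of your argument is right, but as written it would not compile into a complete proof until you (i) drop the unfounded claim about colour $2$, (ii) add the $2\notin C(u_1)$ branch with its reductions to Cases 2.3/2.4, and (iii) replace the 2.4.3-specific switching propositions by their $\{1,2\}$-analogues.
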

\begin{proof}
Then $4\in C(u_1)$, $H$ contains a $(4, T_1)_{(u_1, u_4)}$-path,
and $3\in C(u_2)$, $H$ contains a $(3, T_2)_{(u_2, u_3)}$-path, $T_2\subseteq C(u_3)$.
It follows from Corollary~\ref{auv2}(5.1) that $6, 7\in S_u$,
$1\in C(u_4)$ or $H$ contains a $(1, 3/2)_{(u_4, u)}$-path,
and $2\in C(u_3)$ or $H$ contains a $(1, 2)_{(u_1, u_3)}$-path,
from Corollary~\ref{auv2}(5.2) and (3) that $4\in B_1$, $3\in B_1\cup B_2$,
and $4\in C(u_2)\cup C(u_3)$, $3\in C(u_1)\cup C(u_4)$.
It follows from \ref{122notinu12323}(1) and \ref{125u4u1oru4u2}(1) that $5\in C(u_1)\cup C(u_4)$,
from \ref{122notinu12323}(1) and \ref{125notu2514}(1) that $5\in C(u_1)\cup C(u_2)$.
We distinguish whether or not mult$_{S_u}(5) = 1$.

{\bf Case 1.} mult$_{S_u}(5) = 1$.
It follows that $5\in C(u_1)\setminus (C(u_2)\cup C(u_4))$,
and from \ref{125notu2514}(1) that $H$ contains a $(1, 5)_{(u_1, u_2)}$-path, $1\in C(u_2)$.
If $H$ contains no $(5, j)_{(u_1, u_4)}$-path for some $j\in T_1\cap T_3$,
then $(u u_4, u u_1, u v)\overset{\divideontimes}\to (5, j, T_4)$.
Otherwise, $H$ contains a $(5, T_1\cap T_3)_{(u_1, u_4)}$-path.
If $1\not\in C(u_4)$ and $H$ contains no $(1, 3)_{(u_3, u_4)}$-path,
then $(u u_1, u u_2, u u_4)\to (T_1\cap T_3, 5, 1)$ reduces the proof to (2.4.2.).
Otherwise, $1\in C(u_4)$ or $H$ contains a $(1, 3)_{(u_3, u_4)}$-path.
Hence, $1\in C(u_3)\cup C(u_4)$ and mult$_{S_u}(1)\ge 2$.
Since $5\not\in C(u_4)$, it follows from \ref{12T43167neemptyset}(2) that for some $j\in T_2\cap T_4$,
$H$ contains a $(6, j)_{(u_4, v_6)}$-path and $6\in C(u_4)$, say $j = 8$.
It follows from \ref{126T46multge2}(1) that $6\in S_u\setminus C(u_4)$ and mult$_{S_u}(6)\ge 2$.
Then $\sum_{x\in N(u)\setminus \{v\}}d(x)\ge 6 + 2\times |\{1, 6, 3, 4\}| + |\{2, 5, 7\}| + 2t_{u v} + t_0 = 2\Delta + 13 + t_0$.
It follows that $t_0 = 0$, $\Delta = 7$,
and mult$_{S_u}(i) = 2$, $i = 1, 3, 4, 5$,
mult$_{S_u}(j) = 1$, $j = 2, 7$.
\begin{itemize}
\parskip=0pt
\item
     $1\in C(u_4)\setminus C(u_3)$.
     It follows that $2\in C(u_3)\setminus C(u_1)$ and from \ref{122notinu12323}(1) that $H$ contains a $(2, 3)_{(u_1, u_3)}$-path, $3\in C(u_1)\setminus C(u_4)$.
     Then $(u u_1, u u_2, u u_3, u u_4, u v)\overset{\divideontimes}\to (2, 5, 1, 3, T_1\cap T_4)$.
\item
    $1\in C(u_3)\setminus C(u_4)$.
    It follows that $3\in C(u_4)\setminus C(u_1)$,
    and from \ref{122notinu12323}(1) that $2\in C(u_1)\setminus C(u_3)$.
    One sees that $H$ contains a $(1, 2)_{(u_1, u_3)}$-path.
    Then $u u_2\to \alpha\in T_2\cap T_4$,
    and $(u u_1, u u_3, u u_4, u v)\overset{\divideontimes}\to (3, 2, 1, T_4\setminus \{\alpha\})$.
\end{itemize}

{\bf Case 2.} mult$_{S_u}(5)\ge 2$.
{\em Condition~(C2)} states that mult$_{S_u}(1) = 1$, mult$_{S_u\setminus C(u_4)}(2) = 1$,
and mult$_{S_u}(i) = 2$, $i = 3, 4$.
When (C2) holds, if $2\in C(u_1)$,
one sees that $1\in C(u_3)$ and $3\in C(u_4)$,
then $uu_2\to \alpha\in T_2\cap T_4$,
and $(u u_1, u u_3, u u_4, u v)\overset{\divideontimes}\to (3, 2, 1, T_4\setminus \{\alpha\})$.
Otherwise, $2\in C(u_3)\setminus C(u_1)$.
Then $1\in C(u_4)$ or $H$ contains a $(1, 3)_{(u_3, u_4)}$-path.
It follows from \ref{122notinu12323} (2) that $1\in C(u_4)$,
and $3\in C(u_1)$ or $4\in C(u_2)$.
Let $(u u_1, u u_2, u u_3)\to (2, T_2\cap T_4, 1)$.
If $4\not\in C(u_3)$, then reduces the proof to (2.4.2.).
Otherwise, $4\in C(u_3)$ and $3\in C(u_1)$.
Then $u u_4\to 3$ reduces the proof to (2.4.2.).

In the other case, $\sigma_{[1, 4]} = \sum_{i\in \{1, 3, 4\}}$ mult$_{S_u}(1) + $mult$_{S_u\setminus C(u_4)}(2)\ge 7$. Let $\sigma_{[5, 7]} = \sum_{i\in [5, 7]}$ mult$_{S_u}(i)$.
Recall that $\sum_{x\in N(u)\setminus \{v\}}d(x)\ge 6 + \sigma_{[1, 4]} + \sigma_{[5, 7]} + 2t_{u v} + t_0 = 2\Delta + 2 + \sigma_{[1, 4]} + \sigma_{[5, 7]} + t_0$.
If $\sigma_{[5, 7]}\ge 5$, one sees that $\sum_{x\in N(u)\setminus \{v\}}d(x)\ge 2\Delta + 2 + \sigma_{[1, 4]} + 5 + t_0 = 2\Delta + 7 + \sigma_{[1, 4]} + t_0$, then $\sigma_{[1, 4]}\le 6$ and (C2) holds.
Otherwise, $\sigma_{[5, 7]}\le 4$.
It follows that mult$_{S_u}(5) = 2$, mult$_{S_u}(i) = 1$, $i = 6, 7$,
$\sum_{x\in N(u)\setminus \{v\}}d(x)\ge 2\Delta + 2 + \sigma_{[1, 4]} + 4 + t_0 \ge 2\Delta + 6 + 7 + t_0 = 2\Delta + 13 + t_0\ge 2\Delta + 13$.
Hence, $\Delta = 7$, $\sigma_{[1, 4]} = 7$, $t_0 = 0$, and $T_2\cap C(u_4) = T_1\cap C(u_3) = \emptyset$.
One sees from \ref{126T46multge2}(2) that $5\in C(u_4)$, $H$ contains a $(5, T_4)_{(u_3, u_4)}$-path.
Since $t_0 = 0$ and $T_4\subseteq C(u_3)$,
we have $C_{12} = \emptyset$, i.e., $T_{uv} = T_1\cup T_2$.
%

When $6\in C(u_3)\cup C(u_4)$, one sees that $6\not\in B_1\cup B_2$,
and it follows from \ref{126notinu3}(2) that if $6\in C(u_4)$, then $H$ contains a $(4, 6)_{(u_3, u_4)}$-path;
from \ref{126notinu4}(1) that if $6\in C(u_3)$, then $H$ contains a $(3, 6)_{(u_3, u_4)}$-path.
Since the same argument applies if $u u_1\to 6$ or $u u_2\to 6$, $C(v_6) = \{4, 6\}$ $\cup T_1\cup T_2$.
Then $(v v_6, u v)\overset{\divideontimes}\to (3, 6)$.
In the other case, $6, 7\not\in C(u_3)\cup C(u_4)$.
Recall that $4\in B_1 \subseteq C(v_1)$.

One sees from \ref{124i4i3j3j}(2) that $H$ contains a $(5, 4)_{(u_3, u_4)}$-path. 
It follows from \ref{125notinu1}(1) that if $5\not\in C(u_1)$, then $H$ contains a $(2, 5)_{(u_1, u_2)}$-path,
and from \ref{125notu2514}(1) that if $5\not\in C(u_2)$, then $H$ contains a $(1, 5)_{(u_1, u_2)}$-path,
i.e., $5\not\in B_1\cup B_2$.
One sees from $\{1, 4\}\cup T_2\subseteq C(v_1)$ and \ref{123v13v2}(1) that $T_1\setminus C(v_1)\ne \emptyset$.
Then $(v u_3, u v)\overset{\divideontimes}\to (T_1\cap R_1, 5)$.
\end{proof}

Recall that $t_1 + t_2 = \Delta - d(u_1) + \Delta - d(u_2) + w_1 + w_2 - 4\ge w_1 + w_2 - 4$.
It follows that $w_1 + w_2\le 4 + t_1 + t_2$,
and $t_1 + t_2 = w_1 + w_2 - 4$ if and only if $d(u_1) = d(u_2) = \Delta$.

\begin{lemma}
\label{12T1T2ge4}
If $t_1 + t_2\ge 4$, then $T_2\setminus C(u_4)\ne \emptyset$ and $T_1\setminus C(u_3)\ne \emptyset$.
\end{lemma}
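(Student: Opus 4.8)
I would prove the two clauses $T_2\setminus C(u_4)\ne\emptyset$ and $T_1\setminus C(u_3)\ne\emptyset$ separately, each by contradiction, following the counting-plus-recolouring template of the preceding lemmas. A fact I would use throughout is that $T_1$ and $T_2$ are disjoint: since $T_2\subseteq B_1\subseteq C(u_1)$ while $T_1=T_{uv}\setminus C(u_1)$, we have $T_1\cap T_2=\emptyset$, so $|T_1\cup T_2|=t_1+t_2\ge4$. I would also keep in mind that the two clauses are \emph{not} symmetric: assuming $T_1\subseteq C(u_3)$ puts $T_1\subseteq C(u_2)\cap C(u_3)\cap C(u_4)$, so every colour of $T_1$ has multiplicity at least $3$ in $S_u$ and hence $t_0\ge t_1$ for free, whereas assuming $T_2\subseteq C(u_4)$ only yields $T_2\subseteq C(u_1)\cap C(u_4)$ and multiplicity $2$. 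I would therefore treat $T_2\subseteq C(u_4)$ as the main case and obtain $T_1\subseteq C(u_3)$ by the same scheme with the easier start.

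For the case $T_2\subseteq C(u_4)$, first I would record the forced memberships $T_2\subseteq C(u_1)\cap C(u_4)$ and $T_1\subseteq C(u_2)\cap C(u_4)$, which already supply the baseline $2t_{uv}$ in Eq.~(\ref{eq17}); in addition $w_1,w_2\ge3$ by Lemmas~\ref{12T2ge1} and \ref{12T1ge1}, and $w_4\ge3$ because $\{4,2,b_2\}\subseteq W_4$ where $b_2\in[5,7]\cap C(u_4)$ from Proposition~\ref{12T43167neemptyset}. The aim of the count is then to show that enough colours of $T_1\cup T_2$ also sit on $u_3$ (entering $T_0$), forcing $\sum_{i\in[1,4]}d(u_i)$ past the bound $2\Delta+13$ imposed by configuration ($A_8$).

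To produce those extra memberships I would feed the colours $5,6,7$ and the non-empty $T_4$ into the path dichotomies \ref{124i4i3j3j}, \ref{126T46multge2}, \ref{12686862} and \ref{12787872}: for each such colour these assert that either a recolouring of $uu_1,\dots,uu_4$ followed by some $uv\overset{\divideontimes}\to\cdot$ extends the colouring to $G$ (done), or a bichromatic path forces the colour onto several of $u_1,\dots,u_4$. Combined with Corollary~\ref{auv2}(5.2) and (4), any colour of $T_2$ missing from $C(u_3)$ forces $\mathrm{mult}_{S_u}(3)\ge2$ and $\mathrm{mult}_{S_u}(4)\ge2$; plugging these together with the $w_i$ bounds into Eq.~(\ref{eq17}) I expect $\sum_{i\in[1,4]}d(u_i)\ge2\Delta+14$, the desired contradiction.

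The step I expect to be the main obstacle is the borderline accounting: several sub-configurations bring the count only to $\sum_{i\in[1,4]}d(u_i)=2\Delta+13$, where counting alone does not suffice. There I would, exactly as in the preceding lemmas, exhibit an explicit short Kempe swap on $uu_1,\dots,uu_4$ that either extends the colouring to $G$ or reduces $a_{uv}$ to $1$; tracking precisely which of $5,6,7$ and which elements of $T_4$ lie on $u_3$ rather than on $u_4$ --- and hence which swap is legal --- is the delicate part. Once this lemma is established it is exactly the hypothesis of Lemma~\ref{12T2u4T1u3}, so the two together dispose of the whole case $t_1+t_2\ge4$.
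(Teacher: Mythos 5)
Your setup is sound --- the disjointness $T_1\cap T_2=\emptyset$, the forced memberships $T_2\subseteq C(u_1)$ and $T_1\subseteq C(u_2)\cap C(u_4)$, and the count-plus-recolour template are exactly what the paper uses --- but the plan to prove the two clauses \emph{independently}, with $T_1\subseteq C(u_3)$ as the ``easier'' one, does not close. Assuming only $T_1\subseteq C(u_3)$ does give $t_0\ge t_1$ for free, but with $w_1,w_2\ge 3$, $w_3\ge|\{3,5\}|=2$ and $w_4\ge|\{2,4,b_2\}|=3$, Eq.~(\ref{eq17}) yields only $\sum_i d(u_i)\ge 11+2(\Delta-2)+t_1\ge 2\Delta+9$, nowhere near the $2\Delta+13$/$2\Delta+14$ threshold of ($A_8$), and no recolouring is available until you know where $T_2$ sits. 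The paper's proof of the second clause genuinely depends on the first: once $T_2\setminus C(u_4)\ne\emptyset$ is known, any witness $j\in T_2\setminus C(u_4)$ satisfies $j\notin C(u_2)\cup C(u_4)$, so $\mathrm{mult}_{S_u}(j)=2$ exactly and Corollary~\ref{auv2d(u)5}(5) forces $T_2\subseteq C(u_3)$; only under the combined hypothesis $T_1\cup T_2\subseteq C(u_3)$ is $C(u_3)$ pinned down to $\{3,5,1/2\}\cup T_1\cup T_2$ (via $d(u_3)\le 7$), which is what makes the final swaps legal. Executing your ``same scheme'' for clause two would force you to split on the location of $T_2$ anyway and you would land back in exactly these two coupled cases, so the claimed independence is illusory.

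A second, smaller point: in the case $T_2\subseteq C(u_4)$ your stated aim is to push colours of $T_1\cup T_2$ onto $u_3$ and drive $\sum_i d(u_i)$ past $2\Delta+13$. That is not where the contradiction comes from. The engine of the paper's Case~1 is a single-vertex bound: $\{2,4,b_2\}\cup T_1\cup T_2\subseteq C(u_4)$ together with $d(u_4)\le 7$ forces $t_1+t_2=4$ and determines $C(u_4)$ completely, after which \ref{12T43167neemptyset}, \ref{124i4i3j3j} and \ref{1264646j6j6Su} fix the path structure and every branch bottoms out at $\sum_i d(u_i)=2\Delta+13$ with an explicit recolouring such as $(uu_1,uu_2,uv)\overset{\divideontimes}\to(2,5,C_{12})$. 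The ``borderline accounting'' you flag as an occasional obstacle is in fact the entirety of the argument; the swaps must be supplied in every branch, not just at exceptional configurations.
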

\begin{proof}
Assume w.l.o.g. $[8, 11]\subseteq T_1\cup T_2$.
It follows that $w_1 + w_2\le 8$ and if $w_1 + w_2 = 8$,
then $d(u_1) = d(u_2) = \Delta = 7$, and $d(u_3)\le 6$.
We consider the following two cases, respectively.

{\bf Case 1.} $T_2\subseteq C(u_4)$.
One sees from \ref{12T43167neemptyset}(2) that $5/6/7\in C(u_4)$.
It follows that $t_1 + t_2 = 4$ and $C(u_4) = \{4, 2, 5/6/7\}\cup [8, 11]$, $C_{12} = T_4 = T_{u v}\setminus [8, 11]$.
Further, it follows from \ref{12T43167neemptyset}(1) that $6\in C(u_4)$, $H$ contains a $(3, 6)_{(u_4, v_6)}$-path,
from Lemma~\ref{auvge2}(1.2) that $H$ contains a $(6, C_{12})_{(u_4, v_6)}$-path,
and from \ref{124i4i3j3j}(2) that $H$ contains a $(6, 4)_{(u_4, v_6)}$-path.
Hence, it follows from \ref{1264646j6j6Su} that $6\in S_u\setminus C(u_4)$.
One sees from \ref{12T4123}(3) that $4\in B_1$, $4\in C(u_2)\cup C(u_3)$,
and from \ref{125notinu1}(1) that $5\in C(u_1)$, or $H$ contains a $(2, 5)_{(u_1, u_2)}$-path.

{Case 1.1.} $H$ contains no $(1, 2)_{(u_2, v)}$-path.
It follows from Corollary~\ref{auv2d(u)5}(7.2) that $3\in C(u_1)$ and $H$ contains a $(3, T_1)_{(u_1, u_3)}$-path, $T_1\subseteq C(u_3)$.
If $1\not\in S_u$, then $(u u_1, u u_3, u u_4, u v)\overset{\divideontimes}\to (T_1, 1, 3, T_4)$.
Since the same argument applies if $7\not\in S_u$ and $uu_1\to 7$, we have $1, 7\in S_u$.
One sees that $6 + |\{3, 4, 4\}| + |\{6\}| + |\{6, 7, 1, 5\}| + 2t_{u v} + t_0 = 2\Delta + 10 + t_0$.
It follows that $T_2\setminus C(u_3)\ne \emptyset$,
and then from Corollary~\ref{auv2}(5.2) that $3\in C(u_2)$.
One sees from $1, 3, 4, 5/2\in C(u_1)$ that $t_1\ge 2$.
Hence, $\sum_{x\in N(u)\setminus \{v\}}d(x)\ge 2\Delta + 10 + t_0 + |\{3\}|\ge 2\Delta + 11 + t_1\ge 2\Delta + 11 + 2 = 2\Delta + 13$.
It follows that $t_1 = 2$, $2\not\in S_u\setminus C(u_4)$, $C_{12}\cap C(u_3) = \emptyset$,
and then $W_1 = \{1, 3, 4, 5\}$.
Thus, $(u u_1, u u_2, u v)\overset{\divideontimes}\to (2, 5, C_{12})$.

{Case 1.2.} $H$ contains a $(1, 2)_{(u_2, v)}$-path.
Then $1\in C(u_2)$.
If $1\not\in C(u_3)$ and $3\not\in C(u_1)$,
then $(u u_1, u u_3, u u_4, u v)\overset{\divideontimes}\to (T_1, 1, 3, T_4)$.
Otherwise, $1\in C(u_3)$ or $3\in C(u_1)$.
When $T_1\cup T_2\subseteq C(u_3)$,
one sees that $6 + |\{1, 4, 4, 6, 5, 6, 1/3\}| + 2t_{uv} + t_0 = 2\Delta + 9 + t_0\ge 2\Delta + 9 + t_1 + t_2 = 2\Delta + 13$.
It follows that $7\not\in S_u$ and mult$_{S_u}(3)\le 1$.
Thus, by \ref{126notinu4}, we are done.

In the other case, $(T_1\cup T_2)\setminus C(u_3)\ne \emptyset$.
It follows from Corollary~\ref{auv2}(5.2) and (3) that $3\in B_1$ and $3\in C(u_2)$,
and from \ref{126notinu4} that $7\in C(u_1)\cup C(u_2)$.
Together with $5\in C(u_1)\cup C(u_2)$, we have $W_1\cup W_2 = [1, 3]\cup \{1, 3, 4, 5\}\cup \{7\}$,
$4, 6\in C(u_3)$.
It follows from \ref{122notinu12323}(1) that $2\in C(u_3)$.
One sees that $6 + |\{1, 4, 4, 3, 3, 6, 6, 5, 7, 2\}| + 2t_{u v} + t_0\ge 2\Delta + 12 + t_2\ge 2\Delta + 12 + 1 = 2\Delta + 13$.
It follows that $t_2 = 1$, $7\in C(u_1)$, $T_1\cap C(u_3) = \emptyset$, and $1\not\in C(u_3)$.
Thus, $(u u_2, u u_3, u v)\overset{\divideontimes}\to (T_2, 7, T_1)$
\footnote{Or, by \ref{126notinu3}(2), we are done. }.

{\bf Case 2.} $T_2\setminus C(u_4)\ne \emptyset$ and $T_2\cup T_1 \subseteq C(u_3)$.
Then $3\in C(u_2)$ and $H$ contains a $(3, T_2)_{(u_2, u_3)}$-path.
It follows from Corollary~\ref{auv2}(5.2) and (3) that $4\in B_1$, and $4\in C(u_2)\cup C(u_3)$,
and from Corollary~\ref{auv2d(u)5}(7.2) that $2\in C(u_3)$ or $H$ contains a $(1, 2)_{(u_1, u_3)}$-path.
Hence, $t_1 + t_2 = 4$, $C(u_3) = \{3, 5, 1/2\}\cup T_1\cup T_2$, $4\in C(u_1)\cap C(u_2)$,
and from \ref{126notinu3}(2) that $6, 7\in C(u_1)\cup C(u_2)$.

If $C(u_3) = \{3, 5, 1\}\cup T_1\cup T_2$, one sees that $2\in C(u_1)$ and $5\not\in C(u_1)\cup C(u_2)$,
then $(v u_3, u u_3, u u_2)\to (4, 5, T_2\cap T_4)$ reduces the proof to (2.4.1.).
Otherwise, $C(u_4) = \{3, 5, 2\}\cup T_1\cup T_2$.
It follows from \ref{126notinu3}(2) that $6, 7\in C(u_2)$.
By Lemma~\ref{12T2ge1} and Lemma~\ref{12T1ge1}, $W_1 = [3, 4]\cup [6, 7]$, $W_2 = \{1, 4, a_2\}$,
and $d(u_1) = d(u_2) = \Delta$.
It follows that $\Delta = 7$ and together with $d(u_3) = 7$, $d(u_4)\le 6$.
If $2\not\in C(u_1)$, then $(u u_1, u u_2, u u_3)\to (2, T_2\cap T_4, 1)$ reduces the proof to (2.4.2.).
Otherwise, $2\in C(u_1)$ and $W_1 = \{1, 4, 2\}$.
It follows from Corollary~\ref{auv2}(5.1) that $1\in C(u_4)$,
from Corollary~\ref{auv2}(4) that $12\in C(u_4)$,
and from \ref{125notinu1}(1) that $5\in C(u_4)$, $H$ contains a $(4, 5)_{(u_1, u_4)}$-path.
Hence, $C(u_4) = \{2, 4, 5, 1, 12\}\cup T_1$,
and by \ref{124i4i3j3j}(2), we are done.

Hence, $T_2\setminus C(u_4)\ne \emptyset$.
One sees from Corollary~\ref{auv2d(u)5}(5) that $T_2\subseteq C(u_3)$.
Then $T_1\setminus C(u_3)\ne \emptyset$.
\end{proof}

By Lemma \ref{12T2u4T1u3} and Lemma \ref{12T1T2ge4}, hereafter we assume that $t_1 + t_2\le 3$,
and $T_1\cup T_2\subseteq C(u_i)$ for some $i\in [3, 4]$.
Together with Lemma \ref{12T2ge1} and Lemma \ref{12T1ge1}, $6\le w_1 + w_2\le 4 + t_1 + t_2 \le 7$.
It follows that $w_1 = w_2 = 3$, or $\{w_1, w_2\} = \{3, 4\}$.
One sees from \ref{125notinu1}(1) that $5\in S_u\setminus C(u_3)$.

Assume that $6\not\in S_u$.
It follows from Corollary~\ref{auv2}(5.1) that $3, 4\in C(u_1)\cap C(u_2)$ and $T_1\cup T_2\subseteq C(u_3)\cap C(u_4)$.
If $5\not\in C(u_1)$,
one sees from $w_1 + w_2\le 7$ that $H$ contains no $(2, 5)_{(u_1, u_2)}$-path and $H$ contains a $(4, 5)_{(u_1, u_4)}$-path,
then $(u u_1, u u_4, u v)\overset{\divideontimes}\to (5, 6, 4)$.
Otherwise, $W_1 = \{1\}\cup [4, 5]$ and $W_2 = [2, 4]$.
It follows from \ref{122notinu12323}(1) that $H$ contains a $(5, 4)_{(u_2, u_4)}$-path,
or a $(3, 2)_{(u_1, u_3)}$-path.
If $H$ contains no $(5, 4)_{(u_2, u_4)}$-path,
one sees that $H$ contains a $(3, 2)_{(u_1, u_3)}$-path,
then $(u u_1, u u_2, u u_3, u v)\overset{\divideontimes}\to (2, 5, 6, 3)$.
Otherwise, $H$ contains a $(5, 4)_{(u_2, u_4)}$-path.
If $2\not\in C(u_3)$, then $(u u_1, u u_2, u u_4, u v)\overset{\divideontimes}\to (2, 5, 6, 4)$.
Otherwise, $2\in C(u_3)$.
One sees that $|\{1, 3, 4, 5\}| + |[2, 4]| + |\{3, 5, 2\}| + |\{2, 4, 5\}| + 2t_{uv} + t_0  = 2\Delta + 9 + t_0$.
If $C_{12}\subseteq C(u_3)\cup C(u_4)$, then $t_0\ge t_{u v} = \Delta - 2$,
and $\sum_{x\in N(u)\setminus \{v\}}d(x)\ge 2\Delta + 9 + \Delta - 2 = 3\Delta + 7\ge 2\Delta + 14$.
Otherwise, $C_{12}\setminus (C(u_3)\cup C(u_4))\ne \emptyset$.
It follows from \ref{12T43167neemptyset}(2) that $7\in C(u_4)$ and $H$ contains a $(7, j)_{(u_4, v_7)}$-path for some $j\in T_4$.
Hence, $C(u_4) = \{2, 4, 5, 7\}\cup T_1\cup T_2$.
It follows from \ref{124i4i3j3j}(2) that $H$ contains a $(4, 7)_{(u_4, v_7)}$-path,
and from \ref{1264646j6j6Su} that $7\in S_u\setminus C(u_4)$.
Thus, $\sum_{x\in N(u)\setminus \{v\}}d(x)\ge 2\Delta + 9 + t_0 + |\{7, 7\}| = 2\Delta + 11 + t_0\ge 2\Delta + 11 + t_1 + t_2 = 2\Delta + 14$.
Hence, we proceed with the following proposition, or otherwise we are done:
\begin{proposition}
\label{1267inSu}
$5\in S_u\setminus C(u_3)$ and $6, 7\in S_u$.
\end{proposition}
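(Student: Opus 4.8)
The plan is to observe first that the two assertions of the proposition are not on equal footing: $5\in S_u\setminus C(u_3)$ has already been recorded from \ref{125notinu1}(1) in the paragraph preceding the statement, so only $6\in S_u$ and $7\in S_u$ remain. Moreover, in the present configuration ($d(v)=6$, $A_{uv}=\{1,2\}$, $c(vu_3)=5$, $c(vu_4)=2$) the colors $6$ and $7$ play interchangeable roles inside $V_{uv}=[5,7]$: both occur at $v$ on neighbors distinct from $u_3$ and $u_4$. Hence I would prove $6\in S_u$ and then obtain $7\in S_u$ verbatim by swapping the names $6$ and $7$. The argument for $6\in S_u$ runs by contradiction: assume $6\notin S_u$. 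Since $6\in V_{uv}$, Corollary~\ref{auv2}(5.1) cannot deliver $V_{uv}\subseteq S_u$, and reading it contrapositively forces the structural conclusion $3,4\in C(u_1)\cap C(u_2)$ together with $T_1\cup T_2\subseteq C(u_3)\cap C(u_4)$. Combined with $w_1=w_2=3$ (or $\{w_1,w_2\}=\{3,4\}$), $1\in C(u_1)$, $2\in C(u_2)$, this leaves $W_1=\{1,4\}\cup\{5?\}$ and $W_2=\{2,3,4\}$ determined up to the position of color $5$.

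Next I would branch on whether $5\in C(u_1)$. If $5\notin C(u_1)$, then $w_1+w_2\le 7$ excludes a $(2,5)_{(u_1,u_2)}$-path, so \ref{125notinu1}(1) supplies a $(4,5)_{(u_1,u_4)}$-path; recoloring $uu_1\to 5$, $uu_4\to 6$, $uv\to 4$ introduces no bichromatic cycle (the blocked Kempe chains underlying those propositions guarantee this via Lemma~\ref{lemma06}) and we are done. If $5\in C(u_1)$, then $W_1=\{1,4,5\}$ and $W_2=\{2,3,4\}$, and \ref{122notinu12323}(1) produces either a $(3,2)_{(u_1,u_3)}$-path or a $(5,4)_{(u_2,u_4)}$-path. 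In the first case I recolor $uu_1\to 2$, $uu_2\to 5$, $uu_3\to 6$, $uv\to 3$ and finish, so the only surviving situation is that a $(5,4)_{(u_2,u_4)}$-path exists.

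In that situation I would split on $2\in C(u_3)$. When $2\notin C(u_3)$, the recoloring $uu_1\to 2$, $uu_2\to 5$, $uu_4\to 6$, $uv\to 4$ is again acyclic and completes the step. When $2\in C(u_3)$, I turn to the token count. The base estimate gives $\sum_{x\in N(u)\setminus\{v\}}d(x)\ge |\{1,3,4,5\}|+|[2,4]|+|\{3,5,2\}|+|\{2,4,5\}|+2t_{uv}+t_0=2\Delta+9+t_0$. If $C_{12}\subseteq C(u_3)\cup C(u_4)$ then $t_0\ge t_{uv}=\Delta-2$, so the sum is at least $3\Delta+7\ge 2\Delta+14$, contradicting the bound $\sum_{i\in[1,4]}d(u_i)\le 2\Delta+13$ that $(A_8)$ imposes through Eq.~(\ref{eq16}). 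Otherwise $C_{12}\setminus(C(u_3)\cup C(u_4))\ne\emptyset$, and then \ref{12T43167neemptyset}(2) places $7\in C(u_4)$ on a $(7,j)_{(u_4,v_7)}$-path, \ref{124i4i3j3j}(2) upgrades this to a $(4,7)_{(u_4,v_7)}$-path, and \ref{1264646j6j6Su} forces $7\in S_u\setminus C(u_4)$, i.e. $\mathrm{mult}_{S_u}(7)\ge 2$; this lifts the count to $2\Delta+11+t_0\ge 2\Delta+11+t_1+t_2=2\Delta+14$, again contradicting $(A_8)$.

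The step I expect to be hardest is exactly this last $2\in C(u_3)$ branch: one must chain \ref{12T43167neemptyset}, \ref{124i4i3j3j} and \ref{1264646j6j6Su} in precisely the right order so that the extra multiplicity-$2$ token of color $7$ is extracted without being confused with the tokens already charged to $\{3,4,5\}$, and one must keep $t_0\ge t_1+t_2$ honest against the degree budget. Certifying that each recoloring in the earlier branches is genuinely acyclic likewise relies on the maximality of the cited Kempe paths through Lemma~\ref{lemma06}. Once $6\in S_u$ is established, interchanging $6$ and $7$ yields $7\in S_u$ and, together with the previously known $5\in S_u\setminus C(u_3)$, proves the proposition.
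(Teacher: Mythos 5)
Your proposal reproduces the paper's argument essentially verbatim: the same reduction to $6\in S_u$ (with $7$ by the $6\leftrightarrow 7$ symmetry the paper leaves implicit), the same appeal to Corollary~\ref{auv2}(5.1) under the assumption $6\notin S_u$, the same case split on $5\in C(u_1)$ and $2\in C(u_3)$ with identical recolorings, and the same final charge count via \ref{12T43167neemptyset}, \ref{124i4i3j3j} and \ref{1264646j6j6Su}. Correct and same approach; nothing further to add.
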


If $1\not\in S_u$ and $3\not\in C(u_4)$,
one sees from Corollary~\ref{auv2d(u)5}(7.2) that $3, 4\in C(u_1)$ and $H$ contains a $(3, T_1)_{(u_1, u_3)}$-path, a $(4, T_1)_{(u_1, u_4)}$-path,
then $(u u_1, u u_3, u u_4, u v)\to (T_1, 1, 3, T_4)$.
Hence, we proceed with the following proposition, or otherwise we are done:
\begin{proposition}
\label{121Su3Cu4}
If $1\not\in S_u$, then $3, 4\in C(u_1)$, $T_1\subseteq C(u_3)\cap C(u_4)$, and $3\in C(u_4)$.
\end{proposition}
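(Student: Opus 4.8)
The plan is to work inside sub-case (2.5.1), where $\{i_1,i_2\}=\{1,2\}$, $c(vu_3)=5$ and $c(vu_4)=2$; thus $u_4=v_2$ and we already know $T_1\subseteq B_2\subseteq C(u_4)$, so the target $T_1\subseteq C(u_3)\cap C(u_4)$ reduces to establishing $T_1\subseteq C(u_3)$. Throughout I assume $1\notin S_u$, which by definition of $S_u$ means $1\notin C(u_2)\cup C(u_3)\cup C(u_4)$, and I argue under the standing convention that no acyclic edge $(\Delta+5)$-coloring of $G$ has yet been produced (otherwise we are done).

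First I would extract $3,4\in C(u_1)$ together with $T_1\subseteq C(u_3)\cap C(u_4)$ from Corollary~\ref{auv2d(u)5}(7.2), applied with $i_1=1$, $i_2=2$ and both orientations $(i_3,i_4)=(3,4)$ and $(i_3,i_4)=(4,3)$. Its hypotheses are in place: $T_1\neq\emptyset$ and $T_2\neq\emptyset$ by Lemmas~\ref{12T1ge1} and \ref{12T2ge1} (so in particular $(T_2\cup\{i_3\})\setminus C(u_2)\supseteq T_2\neq\emptyset$), while $1\notin C(u_3)$ and $1\notin C(u_4)$ come from $1\notin S_u$; the remaining requirement, the absence of any $(i,1)_{(u_3,u_i)}$- or $(i,1)_{(u_4,u_i)}$-path, must hold in the stuck case, since such a path would let us recolor $uu_3$ (resp.\ $uu_4$) to $1$ and finish. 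The two applications then yield $4\in C(u_1)$ with a $(4,j)_{(u_1,u_4)}$-path for every $j\in T_1$, and $3\in C(u_1)$ with a $(3,j)_{(u_1,u_3)}$-path for every $j\in T_1$; reading off the path endpoints gives $T_1\subseteq C(u_4)$ and $T_1\subseteq C(u_3)$.

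It remains to force $3\in C(u_4)$. Suppose instead $3\notin C(u_4)$. Then I would complete $G$ by the recoloring $(uu_1,uu_3,uu_4,uv)\to(\alpha,1,3,\beta)$ with $\alpha\in T_1$ and $\beta\in T_4$ chosen distinct. This is proper at $u$ (the five incident colors become $\alpha,2,1,3,\beta$), at $u_1$ (since $\alpha\in T_1$ is new there), at $u_3$ (since $1\notin C(u_3)$), at $u_4$ (since $3\notin C(u_4)$), and at $v$ (since $\beta\in T_4\subseteq T_{uv}$ avoids $C(v)$). For acyclicity one checks the color pairs meeting the four altered edges: the $(\alpha,1)$- and $(1,3)$-cycles are impossible because after recoloring $u_1$ carries no color-$1$ edge and $u_3$ carries no color-$3$ edge, while the dangerous $(3,\alpha)$-cycle through $uu_4$ and $uu_1$ is ruled out because $3\notin C(u_4)$ keeps $u_4$ off every $(3,\alpha)$-path, so that the $(3,\alpha)_{(u_1,u_3)}$-path already found pins $u_1$ into a $(3,\alpha)$-component not containing $u_4$, and Lemma~\ref{lemma06} excludes any $(3,\alpha)_{(u_4,u_1)}$-path. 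The $\beta$-cycles are excluded since $\beta$ avoids $C(u_4)$. Hence $G$ is colored, contradicting that we are stuck, so $3\in C(u_4)$.

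The hard part will be this acyclicity bookkeeping of the final recoloring, and in particular the clean use of Lemma~\ref{lemma06} together with $3\notin C(u_4)$ to kill the $(3,\alpha)$-cycle; the membership extractions of the first step are routine once the no-path hypotheses of Corollary~\ref{auv2d(u)5}(7.2) are discharged by the stuck-case convention.
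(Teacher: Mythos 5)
Your route is the paper's own --- both orientations of Corollary~\ref{auv2d(u)5}(7.2) for the $C(u_1)$-memberships and the $(3,T_1)_{(u_1,u_3)}$-, $(4,T_1)_{(u_1,u_4)}$-paths, then the recoloring $(uu_1,uu_3,uu_4,uv)\to(T_1,1,3,T_4)$ to force $3\in C(u_4)$ --- but two of your justifications do not stand as written. First, your discharge of the no-path hypothesis of (7.2) is backwards: you claim the absence of an $(i,1)_{(u_3,u_i)}$-path ``must hold in the stuck case, since such a path would let us recolor $uu_3$ to $1$ and finish.'' It is the \emph{absence} of such a path that makes $uu_3\to 1$ admissible; its presence would close a bichromatic $(1,i)$-cycle and block the recoloring, so being stuck tells you nothing. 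The hypothesis is in fact a consequence of $1\notin S_u$ itself: for $i\ne 1$ the vertex $u_i$ carries no edge colored $1$, so a $(1,i)$-cycle through $uu_3$ and $uu_i$ would need the connecting path to arrive at $u_i$ on a $1$-colored edge that does not exist (and $u_1$ loses its only $1$-edge once $uu_1$ is recolored). That is the argument you need to make.

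Second, your acyclicity audit of the final recoloring omits exactly the two color pairs that are delicate. With the colors $\alpha,2,1,3,\beta$ now at $u$ you must also rule out (i) a $(2,\alpha)$-cycle through $uu_1$ and $uu_2$ --- dangerous precisely because $\alpha\in T_1\subseteq B_2\subseteq C(u_2)\cap C(u_4)$, so $u_2$ does carry an $\alpha$-edge; it fails only because the unique maximal $(2,\alpha)$-path out of that edge runs to $u_4$ and on to $v$ (Lemma~\ref{lemma06}), while $\alpha\notin C(u_1)$ keeps $u_1$ off it --- and (ii) a $(1,\beta)$-cycle through $uu_3$, $uv$ and $vv_1$, which your remark that ``$\beta$ avoids $C(u_4)$'' does not touch (that only kills the $(2,\beta)$-pair via $\beta\notin B_2$). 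Pair (ii) dies because $T_4\subseteq T_{uv}\setminus B_2\subseteq B_1$, so the $(1,\beta)$-component of $vv_1$ runs along the $(1,\beta)_{(u_1,v_1)}$-path to $u_1$, which after the recoloring has no $1$-colored edge and is therefore a dead end of a path rather than part of a cycle. Both gaps are repairable, but until they are filled the recoloring step --- and with it the conclusion $3\in C(u_4)$ --- is not justified.
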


It follows that $6, 7\in S_u$, $5\in S_u\setminus C(u_3)$, and $1\in S_u$ or $3\in C(u_4)$.
One sees that $c_{12} = t_{uv} - t_1 - t_2 = \Delta - 2 - (t_1 + t_2)\ge \Delta - 2- 3 = \Delta - 5\ge 1$.

Assume that $8\in C_{12}\setminus (C(u_3)\cup C(u_4))$.
It follows from \ref{12T43167neemptyset}(2) that $H$ contains a $(6, 8)_{(u_4, v_6)}$-path.
If  mult$_{S_u\setminus C(u_3)}(5) = 1$, then it follows from mult$_{S_u}(5) = 1$ that $7\in C(u_3)$ and $H$ contains a $(7, 8)_{(u_3, v_7)}$-path.
Hence, we proceed with the following proposition:
\begin{proposition}
\label{12C12Cu3Cu4neemptyset}
Assume that $8\in C_{12}\setminus (C(u_3)\cup C(u_4))$.
Then $6\in C(u_4)$, $H$ contains a $(6, 8)_{(u_3, u_4)}$-path,
and if mult$_{S_u\setminus C(u_3)}(5) = 1$, then $7\in C(u_3)$ and $H$ contains a $(7, 8)_{(u_3, v_7)}$-path.
\end{proposition}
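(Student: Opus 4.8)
The plan is to follow the recoloring-or-forced-path dichotomy used throughout this subsection, after first pinning down which of $B_1,B_2$ the colour $8$ belongs to. Since $8\in C_{12}\subseteq T_{uv}$, \ref{prop3001} gives $8\in B_1\cup B_2$; but $8\in B_2$ would force $8\in C(u_2)\cap C(v_2)=C(u_2)\cap C(u_4)$ (recall $v_2=u_4$ because $c(vu_4)=2$), contradicting $8\notin C(u_4)$. Hence $8\in B_1$, so $H$ contains a $(1,8)_{(u_1,v_1)}$-path and $8\in C(u_1)\cap C(v_1)$; moreover $8\in T_3\cap T_4$ since $8\notin C(u_3)\cup C(u_4)$. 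This clean reduction is what makes the rest of the argument one-sided.

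For the first conclusion I would read off from \ref{12T43167neemptyset}(2) that, because $8\in T_4$, there is some $b_2\in[5,7]\cap C(u_4)$ carrying a $(b_2,j)_{(u_4,v)}$-path. To upgrade this to the specific statement $6\in C(u_4)$ together with a $6$-$8$ alternating path reaching $u_3$, I would specialise to $j=8$ via the recoloring $(uu_4,uv)\overset{\divideontimes}\to(8,4)$, which is legitimate at both endpoints ($8\notin C(u)\cup C(u_4)$ and $4\notin C(v)=\{1,2,5,6,7\}$) and hence would give an acyclic edge $(\Delta+5)$-coloring of $G$ unless it creates a bichromatic cycle. Ruling out the $(4,\cdot)_{(u,v)}$-cycles with the already-established path data and \ref{124i4i3j3j}, the only surviving obstruction is an $8$-alternating cycle through the recoloured edge $uu_4$, which forces colour $8$ to be blocked along an edge coloured $6$ at $u_4$; together with $8\in T_3$ this is exactly a $(6,8)_{(u_3,u_4)}$-path, giving $6\in C(u_4)$.

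For the conditional conclusion I would argue by scarcity of colour $5$. The mechanism of \ref{12T43167neemptyset}(2) leaves two candidate colours, $5$ and $7$, able to furnish the second blocking alternating path at the $u_3$-side once the first is consumed by the $(6,8)$-path through $u_4$. Since $5\in S_u\setminus C(u_3)$ with mult$_{S_u\setminus C(u_3)}(5)=1$ and in fact mult$_{S_u}(5)=1$, colour $5$ cannot simultaneously lie at $u_3$ and carry an $8$-alternating path there, because its single occurrence sits on an edge at $u_1$, $u_2$, or $u_4$ while $5\notin C(u_3)$. The blocking role must therefore fall to colour $7$, forcing $7\in C(u_3)$ and a $(7,8)_{(u_3,v_7)}$-path, as claimed.

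The main obstacle I anticipate is the bookkeeping of endpoints: the raw output of \ref{12T43167neemptyset}(2) attaches its forced paths to $v_6$ (resp.\ $v_7$), whereas the proposition is stated with a $6$-$8$ path running between $u_3$ and $u_4$, so one must verify that $8\in T_3$ lets the path terminate at $u_3$ and not merely at $v_6$, and must check that the several candidate finishing recolorings such as $(uu_4,uv)\overset{\divideontimes}\to(8,4)$ do not silently succeed (which would already close the case). Tracking every occurrence of the scarce colour $5$ across $S_u$, and distinguishing mult$_{S_u\setminus C(u_3)}(5)$ from mult$_{S_u}(5)$, is the delicate part of the second half.
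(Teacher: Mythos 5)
Your first paragraph is right and is the correct starting point: $8\in T_{uv}\subseteq B_1\cup B_2$ and $8\in B_2$ would put $8\in C(v_2)=C(u_4)$, so $8\in B_1$ and $8\in T_3\cap T_4$. The trouble begins with the mechanism you use to produce the $(6,8)$-path. A bichromatic cycle created by recolouring $u u_4\to 8$ must pass through $u$ and therefore alternate $8$ with a colour of $C(u)\setminus\{4\}=\{1,2,3\}$; likewise a cycle through $uv\to 4$ must alternate $4$ with a colour of $C(u)\cap C(v)=\{1,2\}$. Colour $6$ is not incident to $u$, so \emph{no} obstruction to $(uu_4,uv)\to(8,4)$ can ever be a $(6,8)$-alternating cycle: the surviving obstructions are $(8,1)$- or $(8,2)$-paths from $u_4$ to $u_1$ or $u_2$ (both plausible, since $8\in C_{12}$), or a $(1,4)_{(u_1,v_1)}$-path (i.e.\ $4\in B_1$, which \ref{12T4123}(3) often forces rather than excludes). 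So this recolouring can neither be pushed through nor made to yield the claimed path. The edge that has to be recoloured to $8$ is $v u_4=v v_2$ (currently coloured $2$): the blocking cycle then passes through $v$, exits $v$ along one of $vv_1,vv_5,vv_6,vv_7$, and after discarding $1$ (Lemma~\ref{lemma06} applied to the $(1,8)$-path of $B_1$, whose component already ends at $u$ and $v$) and $5$ (because $v_5=u_3$ and $8\notin C(u_3)$), what survives is a $(b,8)_{(u_4,v_b)}$-path with $b\in\{6,7\}\cap C(u_4)$, taken to be $6$ by the symmetry of $6$ and $7$. None of this is in your write-up.

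The conditional part has the same structural problem. The hypothesis $\mathrm{mult}_{S_u\setminus C(u_3)}(5)=1$ is not there to disqualify colour $5$ as a blocker; its job is to force the recolouring $v u_3=v v_5\to 8$ to be blocked at all, via the contrapositive of Corollary~\ref{auv2}(4) (whose conclusion would give $\mathrm{mult}_{S_u}(5)\ge 2$ if the recolouring created no bichromatic cycle). Once it is blocked, the candidate blocking colours are $\{1,2,6,7\}\cap C(u_3)$ — not ``$5$ and $7$'' as you claim ($5$ is the colour of the very edge being recoloured, so it is never a candidate) — and one must separately kill $1$ and $2$ (as above) and $6$ (because the $(6,8)$-component of $v_6$ is a path that already terminates at $u_4$, so by Lemma~\ref{lemma06} it cannot also terminate at $u_3$, both being $8$-free vertices). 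Only then is $b_5=7$ forced. Your scarcity argument for colour $5$ also quietly assumes $5\notin C(u_3)$, which does not follow from ``$5\in S_u\setminus C(u_3)$''. So both halves of the proposal rest on recolourings whose obstructions cannot produce the paths you need, and the actual exclusion arguments are missing.
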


\begin{lemma}
\label{12T2neu4emptyset}
If $T_2\setminus C(u_4)\ne \emptyset$, then we are done.
\end{lemma}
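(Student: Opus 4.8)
The plan is to derive a contradiction in almost every branch by forcing $\sum_{x\in N(u)\setminus\{v\}}d(x)=\sum_{i\in[1,4]}d(u_i)\ge 2\Delta+14$, which is impossible by Eq.~(\ref{eq16}) (for both ($A_7$) and ($A_8$) this sum is at most $2\Delta+13$); the few branches where the count falls short will instead be closed by an explicit swap producing a valid coloring of $G$ (via $\overset{\divideontimes}\to$) or a reduction to $a_{uv}\le 1$. First I would observe that the standing fact ``$T_1\cup T_2\subseteq C(u_i)$ for some $i\in[3,4]$'' together with the hypothesis $T_2\setminus C(u_4)\ne\emptyset$ forces $T_1\cup T_2\subseteq C(u_3)$. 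Consequently $T_1\subseteq C(u_2)\cap C(u_3)\cap C(u_4)$, so every colour of $T_1$ has multiplicity at least $3$ in $S_u$ (hence $T_1\subseteq T_0$), while $T_2\subseteq C(u_1)\cap C(u_3)$ gives $\mathrm{mult}_{S_u}(j)\ge 2$ for each $j\in T_2$.

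Next I would extract the structural consequences of $T_2\subseteq C(u_1)\cap C(u_3)$ from the symmetric form of Corollary~\ref{auv2}(5): for a colour $j\in T_2$ of multiplicity exactly $2$ this yields $3\in C(u_2)$ together with a $(3,T_2)_{(u_2,u_3)}$-path, as well as $\mathrm{mult}_{S_u}(4)\ge 2$ and the absence of $(j,i)_{(u_4,\cdot)}$-paths; colours of $T_2$ lying in $T_0$ only help the count. Combining these with the already established memberships $5\in S_u\setminus C(u_3)$ and $6,7\in S_u$ (Proposition~\ref{1267inSu}), with $w_1,w_2\ge 3$ (Lemmas~\ref{12T2ge1} and~\ref{12T1ge1}), and with $c_{12}\ge\Delta-5$, I would plug everything into the refined inequality Eq.~(\ref{eq17}), $\sum_i d(u_i)\ge\sum_i w_i+2t_{uv}+\sum_{j\in T_0}(\mathrm{mult}_{S_u}(j)-2)$, keeping careful track of where the colours $2,3,4$ and $5,6,7$ sit around $u_1,\dots,u_4$.

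The natural case split is on the placement of the common part $C_{12}$. If $C_{12}\subseteq C(u_3)\cup C(u_4)$, then every colour of $C_{12}$ (already at $u_1,u_2$) gains a third appearance, so $\sum_i d(u_i)$ grows with $c_{12}\ge\Delta-5$ and the bound $2\Delta+14$ is reached once the residual memberships above are added. If instead $C_{12}\setminus(C(u_3)\cup C(u_4))\ne\emptyset$, I would invoke Proposition~\ref{12C12Cu3Cu4neemptyset} to obtain $6\in C(u_4)$ with a $(6,\cdot)_{(u_3,u_4)}$-path, and, when $\mathrm{mult}_{S_u\setminus C(u_3)}(5)=1$, also $7\in C(u_3)$ with a $(7,\cdot)_{(u_3,v_7)}$-path; these paths are exactly what is needed either to raise $\mathrm{mult}_{S_u}(6)$ or $\mathrm{mult}_{S_u}(7)$ to $2$ (finishing the count) or to perform a recolouring of $uu_1,uu_2,uu_3,uu_4$ that, followed by $\overset{\divideontimes}\to$ on $uv$, completes $G$ or reduces to $a_{uv}\le1$. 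The $(3,T_2)_{(u_2,u_3)}$- and $(4,T_1)_{(u_1,u_4)}$-paths, together with Proposition~\ref{1331recoloring} and Corollary~\ref{auv2d(u)5}(2.1), drive these swaps.

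The hard part will be the tight bookkeeping in the second and third paragraphs: because the contradiction threshold $2\Delta+14$ exceeds the admissible $2\Delta+13$ by only one, every colour incidence among $\{2,3,4,5,6,7\}$ at $u_1,u_2,u_3,u_4$ must be counted exactly, and the sub-branches in which a single incidence is missing cannot be settled by counting and require a bespoke swap. Managing the interaction between the positions of $3,4$ at $u_1,u_2$ and the available $5,6,7$-coloured bichromatic paths at $u_3,u_4$—which determine whether a given swap is legal—will generate the bulk of the subcases and is where the main difficulty lies.
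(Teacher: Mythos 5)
Your overall framework matches the paper's: establish $T_1\cup T_2\subseteq C(u_3)$, extract the $(3,T_2)_{(u_2,u_3)}$-path and $4\in B_1$ from Corollary~\ref{auv2}(5.2)--(3), and then push $\sum_{i\in[1,4]}d(u_i)$ past the admissible bound, falling back on recolourings where the count is short. Your preliminary observations ($T_1\subseteq C(u_2)\cap C(u_3)\cap C(u_4)$, hence $T_1\subseteq T_0$; $\operatorname{mult}_{S_u}(j)\ge 2$ for $j\in T_2$) are correct.

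There is, however, a concrete gap in your first main case. You assert that when $C_{12}\subseteq C(u_3)\cup C(u_4)$ the counting alone reaches $2\Delta+14$. It does not. The paper's count in this situation is $2\Delta+9+\sigma_{13}+t_0$ with $\sigma_{13}=\operatorname{mult}_{S_u}(1)+\operatorname{mult}_{C(u_1)\cup C(u_4)}(3)$, and the branch $\sigma_{13}\le 1$ is live: by \ref{121Su3Cu4} it forces $1\in S_u$ exactly once and $3\notin C(u_1)\cup C(u_4)$, whence even with $t_0\ge t_1+c_{12}\ge 1+(\Delta-5)$ the total caps at $3\Delta+6$, which equals $2\Delta+13$ for $\Delta=7$ and is admissible for ($A_8$). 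No further multiplicity is available there (the third elements of $W_1,W_2$ may coincide with the already-counted occurrences of $2,4,5,6,7$, so you cannot add them again). The paper closes this branch not by counting but by the explicit switch $(uu_3,uu_4)\to(1,3)$ --- legal precisely because $3\notin C(u_1)\cup C(u_4)$ and the structure forces $1\notin C(u_3)$ --- followed by recolouring $uu_1,uu_2$, which reduces to subcase (2.4.2.), i.e.\ to the already-settled configuration $\{i_1,i_2\}=\{1,3\}$ with $c(vu_4)=1$. This also exposes a second, smaller omission in your plan: your list of exit moves (complete the colouring of $G$, or reduce to $a_{uv}\le 1$) is missing the third move the paper relies on throughout Case~2.5, namely reduction to an earlier-handled subcase with $a_{uv}$ still equal to $2$. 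Your treatment of the case $C_{12}\setminus(C(u_3)\cup C(u_4))\ne\emptyset$ via Proposition~\ref{12C12Cu3Cu4neemptyset} and Lemmas~\ref{12686ge2}, \ref{12787ge2} is consistent with what the paper does inside its $\sigma_{13}\ge 2$ branch, but you should reorganize around the dichotomy $\sigma_{13}\ge 2$ versus $\sigma_{13}\le 1$ (or otherwise isolate the configuration $3\notin C(u_1)\cup C(u_4)$) so that the recolouring, not the count, carries the deficient branch.
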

\begin{proof}
Then $3\in C(u_2)$ and $H$ contains a $(3, T_2)_{(u_2, u_3)}$-path.
It follows from Corollary~\ref{auv2}(5.1) that $2\in C(u_3)$ or $H$ contains a $(1, 2)_{(u_1, u_2)}$-path,
from Corollary~\ref{auv2}(5.2) and (3) that $4\in B_1$ and $4\in C(u_2)\cup C(u_3)$.
Let $\sigma_{13} = $ mult$_{S_u}(1) + $mult$_{C(u_1)\cup C(u_4)}(3)$.

When $\sigma_{13}\ge 2$,
one sees that $\sum_{x\in N(u)\setminus \{v\}}d(x)\ge |\{1, 4\}| + |\{2, 3\}| + |\{3, 5\}| + |\{2, 4\}| + |\{2, 4, 5, 6, 7\}| + \sigma_{13} + 2t_{u v} + t_0 = 2\Delta + 9 + \sigma_{13} + t_0
\ge 2\Delta + 9 + 2 + t_1\ge 2\Delta + 11 + 1 = 2\Delta + 12$.
If $C_{12}\subseteq C(u_3)\cup C(u_4)$,
then $\sum_{x\in N(u)\setminus \{v\}}d(x)\ge 2\Delta + 12 + \Delta - 5 = 3\Delta + 7\ge 2\Delta + 14$.
Otherwise, $C_{12}\setminus (C(u_3)\cup C(u_4))\ne \emptyset$.
Assume w.l.o.g. $8\not\in C(u_3)\cup C(u_4)$ and $H$ contains a $(6, 8)_{(u_4, v_6)}$-path.
It follows from \ref{126T46multge2}(1) that $6\in S_u\setminus C(u_4)$.
Hence, $\sum_{x\in N(u)\setminus \{v\}}d(x)\ge 2\Delta + 9 + \sigma_{13} + t_0 + |\{6\}|\ge 2\Delta + 10 + \sigma_{13} + t_1\ge 2\Delta + 10 + 2 + 1 = 2\Delta + 13$.
It follows that $\Delta = 7$, $t_1 = 1$, $\sigma_{13} = 2$, $T_2\cap C(u_4)= \emptyset$,
$C_{12}\cap (C(u_3)\cup C(u_4)) = \emptyset$,
and mult$_{S_u}(i) = 1$, $i = 2, 5, 7$, mult$_{S_u}(j) = 2$, $j = 4, 6$, holds.
It follows from \ref{12C12Cu3Cu4neemptyset} that $7\in C(u_3)$, $H$ contains a $(7, 8)_{(u_3, v_7)}$-path,
and from \ref{12787872}(2.1) that $3\in C(u_1)$.
Hence, $W_1 = \{1, 3, 4\}$.
It follows from \ref{125notinu1}(1) that $H$ contains a $(4, 5)_{(u_1, u_4)}$-path, $5\in C(u_4)$,
from \ref{122notinu12323}(1) that $2\in C(u_3)$,
and from \ref{124i4i3j3j}(2) that $H$ contains a $(4, 6)_{(u_1, u_4)}$-path.
If $6\not\in C(u_3)$, then $(u u_1, u v)\overset{\divideontimes}\to (6, 8)$.
Otherwise, $6\in C(u_3)$ and $C(u_3) = [2, 3]\cup [5, 7]\cup T_1\cup T_2$, $4\in C(u_2)$.
Since $d(u_2) + d(u_3)\ge |[2, 4]| + |[2, 3]| + |[5, 7]| + t_{uv} + t_1 = \Delta + 6 + t_1 \ge \Delta + 6 + 1= \Delta + 7$, we have $1\not\in C(u_2)\cup C(u_3)$.
Then $(u u_1, u u_2, u u_3, u v)\overset{\divideontimes}\to (2, T_2, 1, T_4)$.
In the other case, $\sigma_{13}\le 1$.

One sees from \ref{121Su3Cu4} that $1\in S_u$ or $3\in C(u_4)$.
Hence, $3\not\in C(u_1)$.
It follows from Corollary~\ref{auv2}(5.1) that $1\in C(u_4)$ or $H$ contains a $(1, 2/3)_{(u_4, u)}$-path.
One sees from $1\in S_u$ that $3\not\in C(u_4)$,
and $1\in C(u_4)$ or $H$ contains a $(1, 2)_{(u_2, u_4)}$-path.
Let $(uu_3, uu_4)\to (1, 3)$.
If $1\in C(u_4)$, then $(u u_1, u u_2)\to (2, T_2\cap T_4)$;
if $1\in C(u_2)$, then $u u_1\to T_1$.
These reduce the proof to (2.4.2.).
\end{proof}

Now we assume that $T_2\subseteq C(u_4)$.
One sees from \ref{1267inSu} that $5\in S_u\setminus C(u_3)$ and $6, 7\in S_u$.

When $1\not\in S_u$,
it follows from \ref{124i4i3j3j} that $3, 4\in C(u_1)$, $T_1\subseteq C(u_3)$, $3\in C(u_4)$,
and from \ref{122notinu12323}(2) that $2\in S_u\setminus C(u_4)$.
If $2, 5\not\in C(u_1)$, one sees from \ref{125notinu1}(1) that $H$ contains a $(4, 5)_{(u_1, u_4)}$-path,
then $(u u_1, u u_4, u v)\overset{\divideontimes}\to (5, 1, 4)$.
Otherwise, $5/2\in C(u_1)$.
It follows that $t_1 = 2$ and $t_2 = 1$.
One sees that $\sum_{x\in N(u)\setminus \{v\}}d(x)\ge |\{1, 3, 4\}| + |\{2, a_2\}| + |\{3, 5\}| + |\{2, 4, 3\}| + |\{2, 5, 6, 7\}| + 2t_{uv} + t_0 = 2\Delta + 10 + t_0$.
If $C_{12}\subseteq C(u_3)\cup C(u_4)$, then $t_0\ge c_{12} + t_1 = t_{u v} - t_2 = \Delta - 3$,
and $\sum_{x\in N(u)\setminus \{v\}}d(x)\ge 2\Delta + 10 + t_0\ge 2\Delta + 10 + \Delta - 3 = 3\Delta + 7\ge 2\Delta + 14$.
Otherwise, assume w.l.o.g. $8\not\in C(u_3)\cup C(u_4)$ and $H$ contains a $(6, 8)_{(u_4, v_6)}$-path.
Hence, $C(u_4) = [2, 4]\cup \{6\}\cup T_1\cup T_2$.
It follows that $4\in B_1$, mult$_{S_u}(4)\ge 2$,
from \ref{12T2u4T1u3}(2) that $H$ contains a $(4, 6)_{(u_4, v_6)}$-path,
and from \ref{1264646j6j6Su} that $6\in S_u\setminus C(u_4)$.
Then $\sum_{x\in N(u)\setminus \{v\}}d(x)\ge 2\Delta + 10 + |\{6\}| + t_0\ge 2\Delta + 11 + t_1 = 2\Delta + 13$.
It follows that mult$_{S_u}(i) = 1$, $i = 5, 7$, $a_2 = 4$, $3\not\in C(u_2)$,
and follows from \ref{12C12Cu3Cu4neemptyset} that $7\in C(u_4)$, 
from \ref{12787872}(2.1) that $H$ contains a $(3, 7)_{(u_1, u_3)}$-path.
Thus, by \ref{126notinu4}(2), we are done.
In the other case, $1\in S_u$.
Recall that $\sum_{x\in N(u)\setminus \{v\}}d(x)\le 3\Delta + 6$ for ($A_{8.1}$)--($A_{8.4}$).

Let $\sigma_{[1, 4]} = \sum_{i\in \{1, 3, 4\}}$ mult$_{S_u}(i) + $ mult$_{C(u_1)\cup C(u_3)}(2)$,
 $\sigma_{[5, 7]} = \sum_{i\in [6, 7]}$ mult$_{S_u}(i) + $ mult$_{S_u\setminus C(u_3)}(5)$.

Assume that $\sigma_{[1, 4]}\le 3$.
One sees that if $(T_1\cup T_2)\setminus C(u_3)\ne \emptyset$, then $4\in C(u_1)\cup C(u_2)$ and mult$_{S_u}(3)\ge 2$, $\sigma_{[1, 4]}\ge |\{4, 3, 3, 1\}| = 4$,
and that if $|C(u_1)\cap [3, 4]|\ge 2$, or $|C(u_2)\cap [3, 4]|\ge 2$, then $\sigma_{[1, 4]}\ge 4$.
Then $|C(u_1)\cap [3, 4]| = |C(u_2)\cap [3, 4]| = 1$.
One sees that if $a_2 = 3$, then $2\in C(u_1)\cup C(u_3)$, and $\sigma_{[1, 4]}\ge |\{a_1, a_2, 1, 2\}| = 4$.
Hence, $a_2 = 4$.
It follows from \ref{122notinu12323}(2) $1\in C(u_2)\cup C(u_4)$,
and from Corollary~\ref{auv2d(u)5}(7.2) that $a_1 = 4$.
Thus, $(u u_1, u u_3, u u_4, u v)\overset{\divideontimes}\to (T_1, 1, 3, T_4)$.
Hence, we proceed with the following proposition, or otherwise we are done:
\begin{proposition}
\label{121Su1234ge4}
$1\in S_u$ and $\sigma_{[1, 4]}\ge 4$, $\sigma_{[5, 7]}\ge 3$.
\end{proposition}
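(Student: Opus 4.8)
The plan is to establish the three conclusions by reducing each of their negations to a configuration already disposed of, so that in the one surviving case all three hold simultaneously. I treat them in increasing order of difficulty.

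\emph{The assertion $1\in S_u$.} This has in effect already been secured above: the branch $1\notin S_u$ was settled, where \ref{124i4i3j3j} together with \ref{122notinu12323}(2) forced enough structure on $C(u_1),\dots,C(u_4)$ to recolour $G$ directly. Hence I may, and do, assume $1\in S_u$, i.e. $\mathrm{mult}_{S_u}(1)\ge 1$.

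\emph{The assertion $\sigma_{[5,7]}\ge 3$.} I would read this straight off \ref{1267inSu}, which supplies $5\in S_u\setminus C(u_3)$ and $6,7\in S_u$. Thus each of $\mathrm{mult}_{S_u\setminus C(u_3)}(5)$, $\mathrm{mult}_{S_u}(6)$, $\mathrm{mult}_{S_u}(7)$ is at least $1$, so $\sigma_{[5,7]}\ge 3$ needs no further argument.

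\emph{The assertion $\sigma_{[1,4]}\ge 4$.} This is the substantive part; I would argue it by contradiction, assuming $\sigma_{[1,4]}\le 3$ and constraining the colour sets of $u_1,u_2,u_3,u_4$ until a Kempe-type recolouring finishes $G$. First I would rule out $(T_1\cup T_2)\setminus C(u_3)\ne\emptyset$: there Corollary~\ref{auv2}(5.2) and (3) force $4\in C(u_1)\cup C(u_2)$ and $\mathrm{mult}_{S_u}(3)\ge 2$, which with $1\in S_u$ already yields $\sigma_{[1,4]}\ge 4$. Hence $T_1\cup T_2\subseteq C(u_3)$. Since $1\in S_u$ already contributes one unit, if $|C(u_1)\cap\{3,4\}|\ge 2$ or $|C(u_2)\cap\{3,4\}|\ge 2$ the extra occurrences of $3,4$ push $\sigma_{[1,4]}$ to $4$; so $|C(u_1)\cap\{3,4\}|=|C(u_2)\cap\{3,4\}|=1$, and I name the two forced colours $a_1,a_2$. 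It then remains to pin them down: assuming $a_2=3$ would, through the accumulated path structure, place $2\in C(u_1)\cup C(u_3)$ and again lift $\sigma_{[1,4]}$ to $4$, so $a_2=4$; then \ref{122notinu12323}(2) gives $1\in C(u_2)\cup C(u_4)$ and Corollary~\ref{auv2d(u)5}(7.2) gives $a_1=4$. With $C(u_1),C(u_2)$ now pinned to the shape $\{1,4\}\cup\cdots$, the recolouring $(uu_1,uu_3,uu_4,uv)\overset{\divideontimes}\to(T_1,1,3,T_4)$ applies, and certifying that it creates no bichromatic cycle completes $G$, contradicting $\sigma_{[1,4]}\le 3$. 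The hard part will be precisely this bookkeeping: converting each barred configuration into a guaranteed increment of $\sigma_{[1,4]}$ by invoking Corollary~\ref{auv2}(5) and Corollary~\ref{auv2d(u)5}(7.2) with exactly the right $T_i$/$B_i$ memberships, and then checking acyclicity of the final four-edge recolouring via $T_1\subseteq C(u_2)\cap C(u_4)$, $T_2\subseteq C(u_1)$, the located $(4,\cdot)$- and $(3,\cdot)$-paths, and the maximality principle of Lemma~\ref{lemma06}.
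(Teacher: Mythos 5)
Your proposal follows the paper's own argument essentially line for line: the branch $1\notin S_u$ is disposed of first by the forced structure on $C(u_1),\dots,C(u_4)$ and a direct recolouring, $\sigma_{[5,7]}\ge 3$ is read off \ref{1267inSu}, and $\sigma_{[1,4]}\ge 4$ is obtained by the same case analysis (ruling out $(T_1\cup T_2)\setminus C(u_3)\ne\emptyset$, then forcing $|C(u_i)\cap[3,4]|=1$ and $a_1=a_2=4$ via Corollary~\ref{auv2}(5), \ref{122notinu12323}(2) and Corollary~\ref{auv2d(u)5}(7.2)) ending in the identical recolouring $(uu_1,uu_3,uu_4,uv)\overset{\divideontimes}\to(T_1,1,3,T_4)$. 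This matches the paper's proof in both structure and the lemmas invoked, so no further comparison is needed.
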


Recall that $\sum_{x\in N(u)\setminus \{v\}}d(x)\ge 6 + \sigma_{[1, 4]} + \sigma_{[5, 7]} + 2t_{uv} + t_0 = 2\Delta + 2 + \sigma_{[1, 4]} + \sigma_{[5, 7]} + t_0$.

\begin{lemma}
\label{12686ge2}
If $H$ contains a $(6, 8)_{(u_4, v_6)}$-path for $8\in T_4$, then $6\in S_u\setminus C(u_4)$.
\end{lemma}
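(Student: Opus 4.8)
The plan is to argue by contradiction, assuming that the conclusion fails, i.e. $6\notin S_u\setminus C(u_4)$, and to show that this case cannot occur in the branch we are in, so that $6\in S_u\setminus C(u_4)$ must hold. Since $8\in T_4$ gives $8\notin C(u_4)$, the hypothesised $(6,8)_{(u_4,v_6)}$-path must leave $u_4$ along a color-$6$ edge; first I record the resulting fact $6\in C(u_4)$. This places us exactly in the setting of \ref{12686862}, whose part (2) I would invoke under the assumption $6\notin S_u\setminus C(u_4)$: it supplies $4\in C(u_1)\cap C(u_2)$ together with a $(4,T_1)_{(u_1,u_4)}$-path and a $(4,T_2)_{(u_2,u_4)}$-path, the facts $3\in B_1\cup B_2$ and mult$_{S_u}(3)\ge 2$, and the alternative ``$2\in C(u_1)$ or $H$ contains a $(2,3)_{(u_1,u_3)}$-path''.

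The key structural consequence is that the $(4,T_1)_{(u_1,u_4)}$-path forces $T_1\subseteq C(u_4)$; since we are in the branch $T_2\subseteq C(u_4)$, this yields $T_1\cup T_2\subseteq C(u_4)$. Hence $T_4=T_{uv}\setminus C(u_4)\subseteq C_{12}$, and in particular $8\in C_{12}\setminus C(u_4)$, i.e. $8\in C(u_1)\cap C(u_2)$ but $8\notin C(u_4)$. I would then feed these multiplicities into the standard inequality $\sum_{x\in N(u)\setminus\{v\}}d(x)\ge 2\Delta+2+\sigma_{[1,4]}+\sigma_{[5,7]}+t_0$. Using mult$_{S_u}(1)\ge1$ (as $1\in S_u$), mult$_{S_u}(3)\ge2$, mult$_{S_u}(4)\ge2$ (from $4\in C(u_1)\cap C(u_2)$), and mult$_{C(u_1)\cup C(u_3)}(2)\ge1$ (from \ref{12686862}(2.3), since either $2\in C(u_1)$ or the $(2,3)_{(u_1,u_3)}$-path forces $2\in C(u_3)$), I obtain $\sigma_{[1,4]}\ge 6$; combined with $\sigma_{[5,7]}\ge3$ from \ref{121Su1234ge4} this already gives $\sum_{i\in[1,4]}d(u_i)\ge 2\Delta+11+t_0$, leaving a slack of $3$ to close.

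I would recover the remaining slack by a case split on whether $8\in C(u_3)$. If $8\in C(u_3)$, then $8$ occurs in $C(u_1),C(u_2)$ and $C(u_3)$, so mult$_{S_u}(8)\ge3$ and $8\in T_0$; combining this extra triple occurrence with those forced for the other elements of $T_4$ and with an additional occurrence of one of $5,6,7$ extracted via \ref{auv2}(5.1) and \ref{auv2d(u)5}, I expect the count to reach $\sum_{i\in[1,4]}d(u_i)\ge2\Delta+14$, contradicting the bound $\sum_{i\in[1,4]}d(u_i)\le 2\Delta+13$ that $(A_8)$ imposes. If instead $8\notin C(u_3)$, then $8\in C_{12}\setminus(C(u_3)\cup C(u_4))$, so \ref{12C12Cu3Cu4neemptyset} applies and yields a $(6,8)_{(u_3,u_4)}$-path (and, when mult$_{S_u\setminus C(u_3)}(5)=1$, also $7\in C(u_3)$ with a $(7,8)_{(u_3,v_7)}$-path). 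This $(6,8)_{(u_3,u_4)}$-path and the hypothesised $(6,8)_{(u_4,v_6)}$-path both issue from $u_4$ along its unique color-$6$ edge, hence coincide; by \ref{lemma06} the two claimed endpoints $u_3$ and $v_6$ then lie on a single maximal $(6,8)$-path, and the incidence data ($c(vv_6)=6$, $6\in C(u_4)$, $8\in C(u_1)\cap C(u_2)$) lets me recolor $uu_4$ and color $uv$ without creating a bichromatic cycle, so $G$ is coloured and we are done.

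The main obstacle I anticipate is the tight bookkeeping in the case $8\in C(u_3)$: one must verify that the multiplicities forced by $T_1\cup T_2\subseteq C(u_4)$ and the two $(4,\cdot)$-paths, together with the guaranteed triple occurrence in $T_0$, genuinely suffice to reach $2\Delta+14$ rather than stalling at the critical value $2\Delta+13$ when $\Delta=7$. This requires isolating one further disjoint occurrence of a color in $\{5,6,7\}$ and ensuring the $\sigma_{[5,7]}$ contribution is not being double-counted against the $t_0$ contribution, which is the delicate accounting step of the argument.
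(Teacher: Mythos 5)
Your opening is sound, and your key structural observation is correct: under the assumption $6\not\in S_u\setminus C(u_4)$, \ref{12686862}(2.1) supplies a $(4, T_1)_{(u_1, u_4)}$-path and a $(4, T_2)_{(u_2, u_4)}$-path, and since such a path cannot pass through $u$ (no edge at $u$ carries a colour of $T_{u v}$), it must end at $u_4$ on a $T_1$- (resp.\ $T_2$-) coloured edge, giving $T_1\cup T_2\subseteq C(u_4)$ and hence $T_4\subseteq C_{12}$. The problem is that neither of your closing branches is actually closed. In the branch $8\in C(u_3)$ you only assert that the count ``should'' reach $2\Delta + 14$; with the facts you have assembled it does not. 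You have $\sigma_{[1, 4]}\ge 6$ and $\sigma_{[5, 7]}\ge 3$ (note that under the contradiction hypothesis mult$_{S_u}(6) = 1$, so $\sigma_{[5, 7]}$ really can equal $3$), which gives $\sum_{i\in [1, 4]}d(u_i)\ge 2\Delta + 11 + t_0$; the only guaranteed contributions to $t_0$ are the $c_{12} - t_4$ elements of $C_{12}\setminus T_4$ together with whatever part of $T_{u v}$ happens to lie in $C(u_3)$, and since $t_4\ge 1$ and $c_{12}$ can be as small as $\Delta - 5 = 2$, this stalls at $2\Delta + 12$ or $2\Delta + 13$ --- exactly the admissible value for ($A_8$) when $\Delta = 7$. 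This is why the paper, after reaching the same inequality $2\Delta + 11 + t_0$, does not attempt a pure counting contradiction: it splits on whether $H$ contains a $(4, j)_{(u_4, v)}$-path for some $j\in [5, 7]$ (using \ref{124i4i3j3j}(2) in the negative case to force $3\in C(u_4)$ and $T_4\subseteq C(u_3)$), pins down $W_1$, $W_2$, $W_4$ almost completely in each subcase, and finishes either with an explicit recolouring such as $(u u_1, u u_4, u v)\to (T_1, 1, T_4)$ or with a degree contradiction such as $d(u_3)\ge 7$. The extra units you need come from that structural case analysis, not from the multiplicities alone.

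The branch $8\not\in C(u_3)$ is also not a proof. ``The incidence data lets me recolor $u u_4$ and color $u v$ without creating a bichromatic cycle'' names no colours and verifies nothing; in these arguments every recolouring must specify the target colours and check, via Lemma~\ref{lemma06}, that no bichromatic cycle through the recoloured edges arises, and that verification is precisely what is missing. You are also leaning on the literal wording of \ref{12C12Cu3Cu4neemptyset} (a ``$(6, 8)_{(u_3, u_4)}$-path''), which from its derivation is evidently the $(6, 8)_{(u_4, v_6)}$-path you already have, so the ``two paths coincide'' step buys you nothing. Finally, be careful not to import $\sigma_{[5, 7]}\ge 5$ from \ref{12C12notinCu3Cu4} to rescue this branch: that proposition is proved using the very lemma you are trying to establish.
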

\begin{proof}
Assume that $6\not\in S_u\setminus C(u_4)$.
One sees from \ref{12686862}(1) that $H$ contains a $(4, 6)_{(u_1, u_4)}$-path,
from \ref{12686862}(2) that $4\in C(u_1)\cap C(u_2)$,
$3\in C(u_1)\cup C(u_2)$, mult$_{C(u_1)\cup C(u_2)\cup C(u_4)}(3)\ge 2$,
and $2\in C(u_1)$ or $H$ contains a $(2, 3)_{(u_1, u_3)}$-path.
One sees that $\sum_{x\in N(u)\setminus \{v\}}d(x)\ge 6 + \sigma_{[1, 4]} + \sigma_{[5, 7]} + 2t_{u v} + t_0\ge 6 + |\{4, 4, 3, 3, 2, 1\}| + |\{5, 6, 7\}| + 2t_{u v} + t_0 = 2\Delta + 11 + t_0$,
and $d(u_1) + d(u_2) + d(u_4)\ge 4 + |\{3, 3, 4, 4, 5, 6\}| + 2t_{u v} = 2\Delta + 6$.
It follows that mult$_{C(u_1)}(2) + $mult$_{S_u\setminus C(u_3))}(5) + $mult$_{S_u\setminus C(u_3))}(7)\le 2$.

{\bf Case 1.} $H$ contains a $(4, j)_{(u_4, v)}$-path for some $j\in [5, 7]$.

{Case 1.1.} $5\in C(u_4)$ and $H$ contains a $(4, 5)_{(u_3, u_4)}$-path.
It follows from \ref{125notinu1}(1) that $5\in C(u_1)$,
or $H$ contains a $(2, 5)_{(u_1, u_2)}$-path, $2\in C(u_2)$.
Hence, $5\in C(u_1)$, $d(u_1) + d(u_2) + d(u_4)\ge = 2\Delta + 7$,
and then $2\not\in C(u_1)$, $3\in C(u_1)$.
It follows that $W_1 = \{1\}\cup [3, 5]$, $W_2 = [2, 4]$, $C(u_4) = \{2\}\cup [4, 6]\cup T_1\cup T_2$, $C(u_3) = [1, 5]\cup \{7\}$.
Thus, $(u u_1, u u_4, u v)\overset{\divideontimes}\to (T_1, 1, T_4)$
\footnote{Or, by Corollary~\ref{auv2d(u)5}(7.2), we are done}.

{Case 2.1.} $H$ contains a $(4, 7)_{(u_4, v)}$-path.
Then $d(u_1) + d(u_2) + d(u_4)\ge 2\Delta + 6 + |\{7\}| = 2\Delta + 7$.
It follows that $d(u_3)\le 6$, $2\not\in C(u_1)$, $1, 2\in C(u_4)$, and then $3\in C(u_1)$.
One sees that $5\in C(u_1)$ or $H$ contains a $(4, 5)_{(u_1, u_4)}$-path,
and $d(u_1) + d(u_4)\ge t_{u v} + t_2 + |\{1, 3, 4\}| + |\{2, 4, 6, 7\}| + |\{5\}| = \Delta + 6 + t_2\ge \Delta + 6 + 1 = \Delta + 7$.
It follows that $t_2 = 1$, $3\not\in C(u_4)$, and $W_2 = [2, 4]$,
from \ref{125notu2514}(1) that $T_1\subseteq C(u_3)$.
If $T_2\setminus C(u_3)\ne \emptyset$,
one sees from \ref{125u4u1oru4u2}(2) that if $5\not\in C(u_4)$, then $T_2\subseteq C(u_3)$,
then $5\in C(u_4)$, and $(u u_1, u u_2, u u_4)\to (5, T_2\cap T_3, 1)$ reduces the proof to (2.4.2.).
Otherwise, $T_2\subseteq C(u_3)$.
Thus, $d(u_3)\ge |\{5\}\cup [1, 3]\cup T_1\cup T_2|\ge 7$.

{\bf Case 2.} $H$ contains no $(4, i)_{(u_4, v_i)}$-path for each $i\in [5, 7]$.

One sees from \ref{124i4i3j3j}(2) that $3\in C(u_4)$ and $T_4\subseteq C(u_3)$, i.e., $C_{12}\subseteq C(u_3)\cup C(u_4)$, and $t_0\ge c_{12} = \Delta - 2 - t_1 - t_2\ge \Delta - 2 - 3 = \Delta - 5\ge 1$.
Recall that $\sum_{x\in N(u)\setminus \{v\}}d(x)\ge 2\Delta + 11 + t_0\ge 2\Delta + 11 + c_{12}\ge 2\Delta + 11 + 1 = 2\Delta + 12$.
We have $c_{12}\le t_0\le 2$ and $\Delta \le 7$.
When $w_4\ge 5$, one sees that $d(u_4)\ge w_4 + t_1 + t_2\ge 5 + 1 + 1 = 7$,
it follows that $\Delta = 7$, $t_1 = t_2 = 1$, and thus $t_0\ge c_{12} = \Delta - 4\ge 3$.
In the other case, $W_4 = [2, 4]\cup \{6\}$.

{Case 2.1.} $2\in C(u_1)$.
Then $d(u_1) + d(u_2) + d(u_4)\ge 4 + |\{3, 3, 4, 4, 5, 6, 2\}| + 2t_{uv} = 2\Delta + 7$.
It follows that $d(u_i) = \Delta = 7$, $i = 1, 2, 4$, $d(u_3)\le 6$,
and $\{1, 7, 3, 5\}\subseteq C_{12}\subseteq C(u_3)$, mult$_{S_u\setminus C(u_3)}(5) = 1$.
One sees from $w_1 + w_2\ge |\{1, 2\}| + |\{4, 4, 3, 5, 2\}| = 7$ that $t_1 + t_2 = 3$ and then $c_{12} = 2$.
Hence, $C(u_3) = \{1, 7, 3, 5\}\cup C_{12}$. 
Since mult$_{S_u\setminus C(u_3)}(5) = 1$, it follows from Corollary~\ref{auv2}(4) that $H$ contains a $(7, T_2)_{(u_3, v)}$-path.
If $H$ contains no $(3, 7)_{(u_1, u_3)}$-path, then $(u u_1, u v)\to (7, T_2)$.
Otherwise, $3\in C(u_1)\setminus C(u_2)$ and $H$ contains a $(3, 7)_{(u_1, u_3)}$-path.
Thus, by \ref{126notinu4}(2), we are done.

{Case 2.2.} $2\not\in C(u_1)$.  Then $3, 5\in C(u_1)$ and $W_1 = \{1\}\cup [3, 5]$.
It follows that $t_1 = 2$, $t_2 = 1$, $C(u_4) = \{2, 3, 4, 6\}\cup T_1\cup T_2$,
and $d(u_i) = \Delta = 7$, $i = 1, 2, 4$, $d(u_3)\le 6$.
Hence $W_1 = \{2, 4, \alpha\}$, $C(u_3) = \{2, 3, 5, \beta\}\cup C_{12}$, where $\{\alpha, \beta\} = \{1, 7\}$.
One sees that $5, 3\not\in C(u_2)$ and $1, 5\not\in C(u_4)$.
Thus, by \ref{125u4u1oru4u2}(2), we are done.
\end{proof}

One sees that if $T_1\cup T_2\subseteq C(u_3)$ and $C_{12}\subseteq C(u_3)\cup C(u_4)$,
then $\sum_{x\in N(u)\setminus \{v\}}d(x)\ge 2\Delta + 2 + \sigma_{[1, 4]} + \sigma_{[5, 7]} + t_0\ge 2\Delta + 2 + 4 + 3 + t_{u v}\ge 2\Delta + 9 + \Delta - 2 = 3\Delta + 7$.
Hence, if $T_1\cup T_2\subseteq C(u_3)$, then $C_{12}\setminus (C(u_3)\cup C(u_4))\ne \emptyset$,
and if $C_{12}\subseteq C(u_3)\cup C(u_4)$, then $(T_1\cup T_2)\setminus C(u_3) \ne \emptyset$.

\begin{lemma}
\label{12787ge2}
If $H$ contains a $(7, 8)_{(u_4, v_7)}$-path, where $7\in T_1\cup T_2\cup T_4$,
then $7\in S_u\setminus C(u_3)$.
\end{lemma}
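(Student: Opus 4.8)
The plan is to prove this exactly as the $u_4$-side counterpart Lemma~\ref{12686ge2}, arguing by contradiction. Assume $7\notin S_u\setminus C(u_3)$, so colour $7$ lies on none of the edges $uu_1,uu_2,uu_4$. Since the hypothesis supplies the relevant $(7,8)$-path reaching $v_7$, I would first feed this into Proposition~\ref{12787872}: part~(1) forces $H$ to contain a $(7,2/3/4)_{(u_1,u)}$-path (as $7\notin C(u_1)$), and part~(2), which is stated precisely under the assumption $7\notin S_u\setminus C(u_3)$, pins down the rigid pattern $3\in C(u_1)\cap C(u_2)$ with the accompanying $(3,T_1)_{(u_1,u_3)}$- and $(3,T_2)_{(u_2,u_3)}$-paths, $4\in B_1$ with mult$_{S_u}(4)\ge 2$, and $2\in C(u_1)$ or a $(2,3)_{(u_1,u_3)}$-path. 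Combining with the facts already available in this subcase, namely $5\in S_u\setminus C(u_3)$ and $6,7\in S_u$ from Proposition~\ref{1267inSu} and the dichotomy ``$1\in S_u$, or $3,4\in C(u_1)$ and $3\in C(u_4)$'' from Proposition~\ref{121Su3Cu4}, determines $W_1,W_2$ and most of $W_3,W_4$ up to a few placements of the colours in $\{5,6,7\}$.

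Second, I would run the standard weight count of Eq.~(\ref{eq17}), $\sum_{x\in N(u)\setminus\{v\}}d(x)=\sum_i w_i+\sum_{j\in T_{uv}}\mathrm{mult}_{S_u}(j)\ge \sum_i w_i+2t_{uv}+t_0$, substituting the forced memberships above together with mult$_{S_u}(4)\ge 2$, $6,7\in S_u$, and $5\in S_u\setminus C(u_3)$. The target is to drive the right-hand side to $2\Delta+14$, contradicting the ($A_8$) ceiling $\sum_{x\in N(u)\setminus\{v\}}d(x)\le 2\Delta+13$ valid here for $\Delta=7$ (and even looser for $\Delta\ge 8$). As in Lemma~\ref{12686ge2}, whenever the tally stalls at $2\Delta+12$ or $2\Delta+13$ I would split on whether $T_1\cup T_2\subseteq C(u_3)$: if so, then $C_{12}\setminus(C(u_3)\cup C(u_4))\ne\emptyset$, and Proposition~\ref{12C12Cu3Cu4neemptyset} produces a further $(6,8)$- or $(7,8)$-path that, via Propositions~\ref{126T46multge2} and~\ref{1264646j6j6Su}, yields a second colour of multiplicity $\ge 2$; otherwise $(T_1\cup T_2)\setminus C(u_3)\ne\emptyset$ and Corollary~\ref{auv2}(5.1)--(5.2) forces extra memberships of $3$, $4$ and $1$, again pushing the count over the ceiling.

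Third, in the residual low-degree configurations where the count does not by itself conclude, I would recolour explicitly: push $uu_4\to 8$ (legal since $8\in T_4$), the given $(7,8)_{(u_4,v_7)}$-path certifying that setting $uv\to 7$ creates no bichromatic $(7,\cdot)$-cycle, and symmetrically recolour $uu_1$ or $uu_2$ when $8\in T_1$ or $8\in T_2$, closing out by reducing $|C(u)\cap C(v)|$ and invoking Lemma~\ref{auv1coloring} or Corollary~\ref{auv2d(u)5}(2.1). The main obstacle I anticipate is the borderline tally $\sum=2\Delta+13$, where $d(u_1),d(u_2),d(u_3),d(u_4)$ are each pinned to $6$ or $\Delta=7$ and one must verify, over every distribution of $\{3,4\}$ between $C(u_1)$ and $C(u_2)$ and every placement of $\{5,6,7\}$ among $C(u_3),C(u_4)$, that some admissible recolouring always survives; this bookkeeping is the step most prone to a missed subcase, and I would control it with the uniqueness-of-high-multiplicity arguments used repeatedly above (notably Corollary~\ref{auv2d(u)5}(5)) together with Proposition~\ref{124i4i3j3j} to locate the forced $(4,i)_{(u_4,v)}$-paths that govern these final assignments.
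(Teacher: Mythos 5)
Your opening move is the right one and matches the paper: assume $7\not\in S_u\setminus C(u_3)$ and feed the $(7,8)$-path into \ref{12787872}(1)--(2) to force $3\in C(u_1)\cap C(u_2)$, $T_1\cup T_2\subseteq C(u_3)$, $4\in B_1$ and mult$_{S_u}(4)\ge 2$. But the engine that actually makes the degree count bite is misidentified. The paper does not split on whether $T_1\cup T_2\subseteq C(u_3)$ (that inclusion is already forced by \ref{12787872}(2.1), so your ``otherwise'' branch is vacuous); instead it reduces to the case $8\in T_4$, invokes \ref{12T43167neemptyset}(2) to extract a $(6,8)_{(u_4,v_6)}$-path, and then applies the already-proven Lemma~\ref{12686ge2} to gain mult$_{S_u\setminus C(u_4)}(6)\ge 2$. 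Your substitute, \ref{126T46multge2} (via \ref{1264646j6j6Su}), is unavailable here: its hypothesis $T_2\setminus C(u_4)\ne\emptyset$ has been excluded at this stage by Lemma~\ref{12T2neu4emptyset}, after which the text explicitly assumes $T_2\subseteq C(u_4)$. Without the mult$_{S_u}(6)\ge 2$ contribution the tally $6+\sigma_{[1,4]}+\sigma_{[5,7]}+2t_{uv}+t_0$ only reaches $2\Delta+12$ and the argument stalls.

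The closing step is also broken. Colouring $uv\to 7$ is not even a proper colouring: we are in Case~2 with $d(v)=6$, so $C(v)=\{1,2\}\cup[5,7]$ and the edge $vv_7$ already carries colour $7$; no acyclicity certificate can rescue that. Likewise $uu_4\to 8$ is proper but not automatically acyclic --- you would still have to rule out $(i,8)_{(u_4,u_i)}$-paths for $i\in C(u_4)\cap C(u)$. What the paper actually does at the borderline $\sum_{x\in N(u)\setminus\{v\}}d(x)=2\Delta+13$ is read off the equality case ($t_1=t_2=1$, $W_1=\{1,3,4\}$, mult$_{S_u}(1)=$ mult$_{S_u\setminus C(u_3)}(5)=1$, hence $5\in C(u_4)$ with a $(4,5)_{(u_1,u_4)}$-path and $2\not\in S_u\setminus C(u_4)$) and then perform the recolouring $(uu_1,uu_2)\to(2,5)$, which reduces to Case~2.3. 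So the skeleton (contradiction hypothesis, \ref{12787872}, counting, recolouring) is right, but both the multiplicity boost and the terminal recolouring need to be replaced before this constitutes a proof.
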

\begin{proof}
Assume that $7\not\in S_u\setminus C(u_3)$.
One sees from \ref{12787872}(1) that $H$ contains a $(3, 7)_{(u_i, u_3)}$-path for some $i\in [1, 2]$,
from \ref{12787872}(2) that $3\in C(u_1)\cap C(u_2)$, $T_1\cup T_2\subseteq C(u_3)$,
$4\in B_1$, mult$_{S_u}(4)\ge 2$.

Hence, it suffices to assume that $8\in T_4$ and $3\in C(u_1)$, $H$ contains a $(3, 7)_{(u_1, u_3)}$-path.
One sees from \ref{12T43167neemptyset}(2) that $H$ contains a $(6, 8)_{(u_4, v_6)}$-path.
It follows from Lemma~\ref{12686ge2} that $6\in S_u\setminus C(u_4)$.
One sees that $\sum_{x\in N(u)\setminus \{v\}}d(x)\ge 6 + \sigma_{[1, 4]} + \sigma_{[5, 7]} + 2t_{uv} + t_0
\ge 6 + |\{4, 4, 3, 3, 1\}| + |\{5, 6, 6, 7\}| + 2t_{u v} + t_0 = 2\Delta + 11 + t_0\ge 2\Delta + 11 + t_1 + t_2\ge 2\Delta + 11 + 1 + 1 = 2\Delta + 13$.
It follows that $t_1 = t_2 = 1$, $W_1 = \{1, 3, 4\}$,
$2\not\in S_u\setminus C(u_4)$, and mult$_{S_u}(1) = $ mult$_{S_u\setminus C(u_3)}(5) = 1$.
Hence, $5\in C(u_4)$ and $H$ contains a $(4, 5)_{(u_1, u_4)}$-path.
Thus, $(u u_1, u u_2)\to (2, 5)$ reduces the proof to (2.3.)
\footnote{Or, by \ref{122notinu12323}(1), we are done}.
\end{proof}

If $6\not\in C(u_1)\cup C(u_2)$ and $H$ contains no $(6, i)_{(u_i, u_j)}$-path,
$i = 1, 2$, $j = 3, 4$,
one sees that the same argument applies if $uu_1\to 6$,
and it follows from \ref{122notinu12323}(2) that $2\in C(u_1)$ or $H$ contains a $(3, 2)_{(u_1, u_3)}$-path.
Otherwise, we proceed with the following proposition, or otherwise we are done:
\begin{proposition}
\label{126notCu1Cu2}
If $6\not\in C(u_1)\cup C(u_2)$ and $H$ contains no $(6, i)_{(u_i, u_j)}$-path,
$i = 1, 2$, $j = 3, 4$,
then $2\in C(u_1)$ or $H$ contains a $(3, 2)_{(u_1, u_3)}$-path.
\end{proposition}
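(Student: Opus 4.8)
The plan is to read this statement, like the other ``or otherwise we are done'' propositions in this subcase, as the record of a single recoloring attempt at the edge $uu_1$. First I would verify that the two hypotheses are exactly what is needed to legalize setting $c(uu_1)\to 6$. Since $C(u)=[1,4]$ we have $6\notin C(u)$, so $6$ is free at $u$; the assumption $6\notin C(u_1)$ makes $6$ free at $u_1$ as well. A bichromatic cycle created by this recoloring must pass through $u$ along $u_1uu_k$ with $c(uu_k)=k$ for some $k\in\{2,3,4\}$, hence is a $(6,k)_{(u_1,u_k)}$-cycle. For $k\in\{3,4\}$ such a path is excluded by the hypothesis on the $(6,\cdot)_{(u_i,u_j)}$-paths, and for $k=2$ it is excluded because $6\notin C(u_2)$ forbids a color-$6$ edge at $u_2$ that could close the cycle. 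Thus the map $c'$ obtained from $c$ by $uu_1\to 6$ is still an acyclic edge $(\Delta+5)$-coloring of $H$.

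The next step is to recognize that $c'$ lands us back inside the very configuration of this subcase, only with the shared color $1$ replaced by $6$. Indeed $C'(u)=\{6,2,3,4\}$ while $C(v)=\{1,2\}\cup[5,7]$, so $A_{uv}$ becomes $\{2,6\}$ with $6\in V_{uv}$ (coming from $vv_6$), and $c(vu_3)=5$, $c(vu_4)=2$ are unchanged. Up to the color relabeling $1\leftrightarrow 6$ this is exactly the set-up under which \ref{122notinu12323} was derived, and every edge not incident to $u_1$ keeps its color. Consequently the whole chain of reasoning behind \ref{122notinu12323}---which used only the structural facts recorded earlier in the subcase ($5\in S_u\setminus C(u_3)$, $6,7\in S_u$, $T_2\subseteq C(u_4)$, $T_1\subseteq C(u_2)\cap C(u_4)$, and the like, none of which is disturbed by moving $uu_1$ off color $1$) together with Lemma~\ref{lemma06} and Corollary~\ref{auv2d(u)5}(2.1)---transfers verbatim to $c'$.

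Finally I would read off the conclusion. Applying the transferred \ref{122notinu12323}(2) to $c'$ and translating its output back through the relabeling gives precisely the claimed dichotomy: either color $2$ already appears at $u_1$, or $H$ carries a $(3,2)_{(u_1,u_3)}$-path. Note this path uses only colors $2$ and $3$, avoids $uu_1$, and is therefore the same object in $c$ and in $c'$, so the statement is meaningful for the original coloring. Either branch is exactly the assertion of \ref{126notCu1Cu2}.

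The hard part will be the bookkeeping in the middle step: one must check that the dictionary between the original and recolored configurations matches \emph{colors to colors} and \emph{vertices to vertices} correctly, so that the abstract conclusion of \ref{122notinu12323}(2) specializes to the concrete pair ``$2\in C(u_1)$ or $(3,2)_{(u_1,u_3)}$-path'' rather than to some other pair; and one must confirm that the absence-of-$(6,\cdot)$-path hypotheses are simultaneously strong enough to legalize $uu_1\to 6$ and weak enough that no branch of the transferred argument becomes vacuous. Everything else is the routine path-chasing already used repeatedly in this subsection.
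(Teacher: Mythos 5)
Your first two steps match the paper's (one-line) justification: legalize $uu_1\to 6$ using $6\notin C(u_1)\cup C(u_2)$ and the absence of the listed $(6,\cdot)$-paths, and observe that the recolored coloring $c'$ sits in the same configuration of Case (2.5.1) with the external shared color $1$ replaced by $6$, so that \ref{122notinu12323} transfers with $1\leftrightarrow 6$. That part is fine.

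The gap is in your final step. The transferred \ref{122notinu12323}(2) is an implication whose \emph{hypothesis} is ``$2\notin C(u_1)$ and no $(2,3)_{(u_1,u_3)}$-path'' (unchanged by the relabeling, since it involves neither color $1$ nor color $6$) and whose \emph{conclusion} reads ``$6\in C(u_2)$ or $H$ contains a $(6,i)_{(u_2,u_i)}$-path for some $i\in[3,4]$.'' No dictionary between $c$ and $c'$ turns that conclusion into ``$2\in C(u_1)$ or a $(3,2)_{(u_1,u_3)}$-path'': the former speaks of color $6$ at $u_2$, the latter of colors $2,3$ at $u_1$. So ``translating its output back through the relabeling'' cannot produce the claimed dichotomy, and your closing paragraph (worrying about which pair the conclusion ``specializes to'') shows the plan would stall here. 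The correct closing move is a contrapositive: the second half of the present hypothesis ($6\notin C(u_2)$ together with the absence of $(6,j)_{(u_2,u_j)}$-paths for $j\in[3,4]$) \emph{falsifies} the transferred conclusion of \ref{122notinu12323}(2); hence the transferred hypothesis must fail (or we are done), and its negation is exactly ``$2\in C(u_1)$ or a $(3,2)_{(u_1,u_3)}$-path.'' As written, you never use the $u_2$-half of the hypothesis for anything beyond legalizing $uu_1\to 6$, which is a sign the refutation step is missing. (A secondary point you would also need, and which you correctly flag for the $(2,3)$-path but not here: the transferred conclusion concerns $(6,i)_{(u_2,u_i)}$-paths in $c'$, where $uu_1$ now carries color $6$, so one must check that such a path through the recolored edge cannot arise from the path-conditions stated for $c$.)
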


Assume $6\not\in S_u\setminus C(u_4)$ and $H$ contains no $(4, 6)_{(u_3, u_4)}$-path.
One sees from \ref{126notinu3} that $4\in C(u_1)\cap C(u_2)$,
and from \ref{12T3124}(3) that mult$_{S_u}(3)\ge 2$,
then $\sigma_{[1, 4]}\ge |\{4, 4, 3, 3, 1\}|\ge 5$.
Together with \ref{126notCu1Cu2}, we proceed with the following proposition, or otherwise we are done:
\begin{proposition}
\label{12464614ge5}
Assume $6\not\in S_u\setminus C(u_3)$ and $H$ contains no $(4, 6)_{(u_3, u_4)}$-path. Then
\begin{itemize}
\parskip=0pt
\item[{\rm (1)}]
	$4\in C(u_1)\cap C(u_2)$, mult$_{S_u}(3)\ge 2$, and $\sigma_{[1, 4]}\ge |\{4, 4, 3, 3, 1\}|\ge 5$;
\item[{\rm (2)}]
	 if $H$ contains a $(4, 6)_{(u_4, v)}$-path, then $\sigma_{[1, 4]}\ge 6$,
     and $2\in C(u_1)$ or $H$ contains a $(3, 2)_{(u_1, u)}$-path.
\end{itemize}
\end{proposition}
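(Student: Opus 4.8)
The plan is to prove the proposition in the "residual-structure" style pervading this subsection: the statement records exactly those configurations that must persist once every recoloring move marked $\overset{\divideontimes}\to$ in the surrounding text has failed to produce an acyclic edge $(\Delta+5)$-coloring of $G$. Accordingly I would argue each clause by contraposition — if the asserted structure were violated, then one of the already-prepared moves (in particular the move $uu_1 \to 6$ flagged just before Proposition~\ref{126notCu1Cu2}) would complete the coloring, against the standing assumption that we are not done. Before anything else I would reconcile the index slip: the two standing hypotheses are really $6 \notin S_u\setminus C(u_4)$ (equivalently $6 \notin C(u_1)\cup C(u_2)\cup C(u_3)$, as written in the prose preceding the statement) and the absence of a $(4,6)_{(u_3,u_4)}$-path; in particular $6 \notin C(u_3)$, so $6 \in [6,7]\setminus C(u_3)$ is eligible for the cited propositions.

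For clause (1) I would first invoke Proposition~\ref{126notinu3} with $j = 6$. Its premise $6 \notin C(u_1)$ holds by hypothesis; the required non-existence of a $(6,4)_{(u_4,u_3)}$-path is precisely the second standing hypothesis, while the $(6,2)_{(u_2,u_3)}$-path is excluded because otherwise the preceding recoloring finishes. Clause (1) of that proposition then yields $4 \in C(u_1)$, and its symmetric clause (2) gives $4 \in C(u_2)$, so $\mathrm{mult}_{S_u}(4) \ge 2$. Next I would apply Proposition~\ref{12T3124}(3), again with $j = 6$: here the premises ``no $(6,4)_{(u_4,u_3)}$-path'' and ``no $(6,i)_{(u_3,u_i)}$-path for $i \in [1,2]$'' hold, the latter trivially since $6 \notin C(u_3)$ leaves no colour-$6$ edge at $u_3$ to start such a path. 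This gives $\mathrm{mult}_{S_u}(3) \ge 2$. Combining with $1 \in S_u$ (recorded in the ambient case via Proposition~\ref{121Su1234ge4}), I obtain $\sigma_{[1,4]} = \mathrm{mult}_{S_u}(1) + \mathrm{mult}_{S_u}(3) + \mathrm{mult}_{S_u}(4) + \mathrm{mult}_{C(u_1)\cup C(u_3)}(2) \ge 1 + 2 + 2 + 0 = 5$, which is the displayed multiset bound $\{4,4,3,3,1\}$.

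For clause (2) I would add the hypothesis that $H$ contains a $(4,6)_{(u_4,v)}$-path and invoke Proposition~\ref{126notCu1Cu2}. Its premise $6 \notin C(u_1)\cup C(u_2)$ holds by hypothesis, and the four path-conditions ``no $(6,i)_{(u_i,u_j)}$-path'' for $i \in \{1,2\}$, $j \in \{3,4\}$ are available: the $(6,i)_{(u_i,u_3)}$ cases again fail because $6 \notin C(u_3)$, and the $(6,i)_{(u_i,u_4)}$ cases because otherwise the prepared move $uu_1\to 6$ applies. Proposition~\ref{126notCu1Cu2} then forces $2 \in C(u_1)$ or a $(3,2)_{(u_1,u_3)}$-path; either alternative contributes a fresh unit to $\mathrm{mult}_{C(u_1)\cup C(u_3)}(2)$, the last summand of $\sigma_{[1,4]}$, which was counted as $0$ in the bound above. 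Hence $\sigma_{[1,4]}$ improves from $5$ to at least $6$, completing clause (2).

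The hard part will be purely the verification bookkeeping, not any new idea: one must confirm that every non-existence-of-path premise demanded by Propositions~\ref{126notinu3}, \ref{12T3124} and \ref{126notCu1Cu2} is genuinely in force at this node of the case tree — each is either one of the two standing hypotheses, a consequence of $6 \notin C(u_3)$, or the reason an earlier $\overset{\divideontimes}\to$ move would have fired. I would also double-check disjointness in the degree tally, making sure the colour-$2$ contribution used in clause (2) is accounted under $\mathrm{mult}_{C(u_1)\cup C(u_3)}(2)$ and is not among the $4$'s, $3$'s and $1$ that already supply the bound $5$, so that the increment to $6$ is legitimate; and I would keep the $u_3$/$u_4$ convention consistent throughout to avoid re-introducing the slip in the statement.
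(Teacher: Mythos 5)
Your proposal reproduces the paper's own argument, which is exactly the two sentences preceding the statement: Proposition~\ref{126notinu3} (applied with $j=6$) for $4\in C(u_1)\cap C(u_2)$, Proposition~\ref{12T3124}(3) for mult$_{S_u}(3)\ge 2$, the standing fact $1\in S_u$ from \ref{121Su1234ge4} to reach $\sigma_{[1,4]}\ge 5$, and Proposition~\ref{126notCu1Cu2} to add the colour-$2$ unit in clause (2); your resolution of the $C(u_3)$/$C(u_4)$ slip in favour of the prose version $6\not\in S_u\setminus C(u_4)$ is also forced, since the cited propositions need $6\not\in C(u_3)$ and clause (2)'s hypothesis forces $6\in C(u_4)$, which would make the statement as printed vacuous. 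The one point to repair is your justification of the residual path premises: saying a $(6,2)_{(u_2,u_3)}$-path or a $(6,i)_{(u_i,u_4)}$-path is ``excluded because otherwise the preceding recoloring finishes'' is backwards (such a path would \emph{block} the recoloring, not trigger it, and would leave the cited propositions inapplicable); the correct reason is that any such path capable of closing a bichromatic cycle with the recoloured edge must place the colour $6$ in $C(u_2)$, respectively in $C(u_1)\cup C(u_2)$, contradicting $6\not\in S_u\setminus C(u_4)$, while the only case this does not kill --- a $(6,4)$-path between $u_3$ and $u_4$ --- is precisely the proposition's second hypothesis.
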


Assume $7\not\in S_u\setminus C(u_3)$ and $H$ contains no $(3, 7)_{(u_3, u_4)}$-path.
One sees from \ref{126notinu4} that $3\in C(u_1)\cap C(u_2)$, $T_1\cup T_2\subseteq C(u_3)$,
and from \ref{12T4123}(3) that $4\in B_1$, mult$_{S_u}(4)\ge 2$,
then $\sigma_{[1, 4]}\ge |\{4, 4, 3, 3, 1\}|\ge 5$.
Together with \ref{126notCu1Cu2}, we proceed with the following proposition, or otherwise we are done:
\begin{proposition}
\label{12474714ge5}
Assume $7\not\in S_u\setminus C(u_3)$ and $H$ contains no $(3, 7)_{(u_3, u_4)}$-path. Then
\begin{itemize}
\parskip=0pt
\item[{\rm (1)}]
	$3\in C(u_1)\cap C(u_2)$, $T_1\cup T_2\subseteq C(u_3)$,
    $4\in B_1$, mult$_{S_u}(4)\ge 2$, and $\sigma_{[1, 4]}\ge |\{4, 4, 3, 3, 1\}|\ge 5$;
\item[{\rm (2)}]
	 if $H$ contains a $(3, 7)_{(u_3, v)}$-path, then $\sigma_{[1, 4]}\ge 6$,
     and $2\in C(u_1)$ or $H$ contains a $(3, 2)_{(u_1, u)}$-path.
\end{itemize}
\end{proposition}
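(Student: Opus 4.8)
The plan is to establish Proposition~\ref{12474714ge5} as the colour-$7$ mirror of the preceding Proposition~\ref{12464614ge5}, obtained by interchanging the roles of the triples $(u_3,3,6)$ and $(u_4,4,7)$. First I would unpack the two standing hypotheses. Because $7\not\in S_u\setminus C(u_3)$, the colour $7$ is absent from $C(u_1)\cup C(u_2)\cup C(u_4)$; in particular $7\in[6,7]\setminus C(u_4)$ and $7\not\in C(u_1)$, $7\not\in C(u_2)$. Together with the assumed absence of a $(3,7)_{(u_3,u_4)}$-path, this makes the hypotheses of \ref{126notinu4}(1) and (2) applicable with $j=7$: should any forbidden $(7,i)_{(u_i,u_4)}$-path exist for $i\in[2,3]$ or $i\in\{1,3\}$, recolouring $uu_4\to 7$ would give an acyclic edge colouring reducing to an earlier-settled sub-case, so we may assume none exists. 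Proposition~\ref{126notinu4} then delivers $3\in C(u_1)\cap C(u_2)$ together with the $(3,T_1)_{(u_1,u_3)}$- and $(3,T_2)_{(u_2,u_3)}$-paths, whence $T_1\cup T_2\subseteq C(u_3)$. Invoking \ref{12T4123}(3) with $j=7$ (its hypothesis being exactly the absence of $(7,i)_{(u_4,u_i)}$-paths for $i\in[1,2]$) supplies $4\in B_1\cup B_2$ and mult$_{S_u}(4)\ge 2$, which refines to $4\in B_1$ once the relevant bichromatic path is located on the $u_1$-side via $T_1\subseteq C(u_3)$.

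For the counting estimate in part~(1), I would simply collect multiplicities. From $3\in C(u_1)\cap C(u_2)$ we get $3\in C(u_1)\setminus\{1\}$ and $3\in C(u_2)\setminus\{2\}$, so mult$_{S_u}(3)\ge 2$; the previous step gives mult$_{S_u}(4)\ge 2$; and $1\in S_u$ is already available from \ref{121Su1234ge4}, so mult$_{S_u}(1)\ge 1$. Feeding these into $\sigma_{[1,4]}$ (which is mult$_{S_u}(1)$ plus mult$_{S_u}(3)$ plus mult$_{S_u}(4)$ plus mult$_{C(u_1)\cup C(u_3)}(2)$) yields $\sigma_{[1,4]}\ge 1+2+2=5$, matching $|\{4,4,3,3,1\}|$.

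For part~(2), the extra hypothesis of a $(3,7)_{(u_3,v)}$-path is what forces the colour-$2$ contribution. Such a path pins $7\in C(u_3)$ (consistent with the standing hypothesis) and exhausts the slack used in part~(1), so the recolouring argument behind \ref{126notCu1Cu2} now applies and yields the dichotomy $2\in C(u_1)$, or $H$ contains a $(3,2)_{(u_1,u)}$-path; in either branch the colour $2$ appears in $C(u_1)\cup C(u_3)$, whence mult$_{C(u_1)\cup C(u_3)}(2)\ge 1$ and the part~(1) bound improves to $\sigma_{[1,4]}\ge 6$. The main obstacle I anticipate is precisely this last step: verifying that the $(3,7)_{(u_3,v)}$-path legitimately triggers the conclusion of \ref{126notCu1Cu2} (whose literal statement is phrased for colour $6$), which requires checking that no competing recolouring of $uu_1$ or of the edges incident to $v$ escapes the case analysis. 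This is the same delicate bookkeeping of bichromatic paths and colour multiplicities that dominates the surrounding lemmas, and it is where the bulk of the careful work — and the only real risk of an overlooked escape colouring — resides.
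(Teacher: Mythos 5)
Your proposal is correct and follows essentially the same route as the paper: the hypotheses $7\not\in S_u\setminus C(u_3)$ and the absence of a $(3,7)_{(u_3,u_4)}$-path are unpacked so that \ref{126notinu4}(1)--(2) with $j=7$ yield $3\in C(u_1)\cap C(u_2)$ and $T_1\cup T_2\subseteq C(u_3)$, \ref{12T4123}(3) with $j=7$ yields $4\in B_1$ and mult$_{S_u}(4)\ge 2$, and the count $\sigma_{[1,4]}\ge 5$ then follows from $1\in S_u$ (Proposition~\ref{121Su1234ge4}); part (2) is likewise obtained, as in the paper, from the colour-$7$ analogue of \ref{126notCu1Cu2}, which supplies the extra contribution mult$_{C(u_1)\cup C(u_3)}(2)\ge 1$. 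The one point you flag as delicate (transporting \ref{126notCu1Cu2} from colour $6$ to colour $7$) is handled at exactly the same level of detail in the paper, so there is nothing further to add.
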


Assume that $8\in C_{12}\setminus (C(u_3)\cup C(u_4))$, and $H$ contains a $(6, 8)_{(u_3, u_4)}$-path.
It follows from Lemma~\ref{12686ge2} that $6\in S_u\setminus C(u_4)$.
One sees from Corollary~\ref{auv2}(4) that if $H$ contains no $(7, 8)_{(u_3, v)}$-path,
then mult$_{S_u\setminus C(u_3)}(5)\ge 2$.
If $H$ contains a $(7, 8)_{(u_3, v)}$-path,
then it follows Lemma~\ref{12787ge2} that $7\in S_u\setminus C(u_3)$.
Hence, $\sigma_{[5, 7]}\ge |\{5, 6, 6, 7, 5/7\}|\ge 5$,
and we proceed with the following proposition:
\begin{proposition}
\label{12C12notinCu3Cu4}
If $C_{12}\setminus (C(u_3)\cup C(u_4))\ne \emptyset$, then $\sigma_{[5, 7]}\ge |\{5, 6, 6, 7, 5/7\}|\ge 5$.
\end{proposition}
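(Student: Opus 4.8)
The plan is to fix a single color witnessing the hypothesis, say $8\in C_{12}\setminus (C(u_3)\cup C(u_4))$, so that $8\in (C(u_1)\cap C(u_2))\cap T_3\cap T_4$, and to read the bound on $\sigma_{[5,7]}=\mbox{mult}_{S_u}(6)+\mbox{mult}_{S_u}(7)+\mbox{mult}_{S_u\setminus C(u_3)}(5)$ off the forced structure. By \ref{1267inSu} we already know $5\in S_u\setminus C(u_3)$ and $6,7\in S_u$, so each of the three summands is at least $1$; it therefore suffices to upgrade two of them to at least $2$, which will realize the multiset $\{5,6,6,7,5/7\}$ of size $5$.

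First I would reduce to the case in which $H$ contains a $(6,8)_{(u_3,u_4)}$-path. Since $8$ lies in $C(u_1)\cap C(u_2)$ but in neither $C(u_3)$ nor $C(u_4)$, recoloring $uu_3\to 8$ or $uu_4\to 8$ is legal at the corresponding neighbor and frees the color $3$ or $4$ at $u$; by the recoloring principle, together with \ref{12464614ge5} and \ref{12474714ge5} (which dispose of the subcases where the blocking alternating paths are absent), either we obtain an acyclic edge $(\Delta+5)$-coloring for $H$ with $|C(u)\cap C(v)|\le 1$ (and are done), or $H$ is forced to contain the advertised $(6,8)_{(u_3,u_4)}$-path, the alternation color being the spare color $6$ of $v$ incident to the triangles $uvu_3$ and $uvu_4$. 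In the surviving case, $8\notin C(u_4)$ forces this path to meet $u_4$ along a color-$6$ edge, so $6\in C(u_4)$, and Lemma~\ref{12686ge2} gives $6\in S_u\setminus C(u_4)$; hence $6$ occurs both at $u_4$ and at some $u_i$ with $i\ne 4$, i.e.\ $\mbox{mult}_{S_u}(6)\ge 2$.

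It remains to fill the ``$5/7$'' slot, via a dichotomy on a $(7,8)_{(u_3,v)}$-path. If $H$ contains no such path, then Corollary~\ref{auv2}(4), applied to the color $8\in (C(u_1)\cap C(u_2))\setminus(C(u_3)\cup C(u_4))$, yields $\mbox{mult}_{S_u\setminus C(u_3)}(5)\ge 2$, whence $\sigma_{[5,7]}\ge 2+1+2=5$. If instead $H$ does contain a $(7,8)_{(u_3,v)}$-path, then Lemma~\ref{12787ge2} gives $7\in S_u\setminus C(u_3)$, while the path forces $7\in C(u_3)$, so $\mbox{mult}_{S_u}(7)\ge 2$ and $\sigma_{[5,7]}\ge 2+2+1=5$. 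Either branch gives the claim.

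The step I expect to be the main obstacle is the reduction in the second paragraph: one must verify that every way of recoloring $uu_3$ or $uu_4$ by $8$ either completes or reduces the coloring, so that the only surviving configuration is the one blocked by a $(6,8)$-path, and that the symmetry between $u_3$ and $u_4$ lets this path be oriented from $u_3$ to $u_4$ without loss of generality; this is precisely the content packaged into \ref{12464614ge5}, \ref{12474714ge5}, and Lemma~\ref{12686ge2}. Once that reduction is in place, the multiplicity bookkeeping, using Corollary~\ref{auv2}(4) and Lemma~\ref{12787ge2}, is routine.
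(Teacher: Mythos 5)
Your steps after the $(6,8)$-path is in hand coincide with the paper's proof: Lemma~\ref{12686ge2} upgrades $6$ to multiplicity $\ge 2$, and the dichotomy on a $(7,8)_{(u_3,v)}$-path (Corollary~\ref{auv2}(4) in one branch, Lemma~\ref{12787ge2} in the other) supplies the fifth element $5/7$, exactly as in the text. The problem is your second paragraph, where you manufacture the $(6,8)$-path. The propositions \ref{12464614ge5} and \ref{12474714ge5} do not do what you ascribe to them: they are conditional statements whose conclusions are lower bounds on $\sigma_{[1,4]}$ (and memberships of $3,4$ in $C(u_1)\cap C(u_2)$); they say nothing about forcing an alternating path in the colors $6$ and $8$. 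Moreover, the recoloring $uu_3\to 8$ or $uu_4\to 8$ cannot be ``blocked'' by a $(6,8)$-path in the sense you need: a bichromatic cycle created by $uu_4\to 8$ must pass through $u$ and hence alternate $8$ with a color of $C(u)\setminus\{4\}=\{1,2,3\}$ (for instance the $(2,8)$-obstruction through $v$, since $c(vu_4)=2$ and $8\in C(u_2)$), never with the color $6\in C(v)\setminus C(u)$. So as written, the existence of the $(6,8)$-path is unjustified.

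The fact you need is already on the shelf: Proposition~\ref{12C12Cu3Cu4neemptyset}, established just before this point from \ref{12T43167neemptyset}(2), states precisely that $8\in C_{12}\setminus(C(u_3)\cup C(u_4))$ forces $6\in C(u_4)$ and a $(6,8)$-path through $u_4$ (and, when $\mbox{mult}_{S_u\setminus C(u_3)}(5)=1$, a $(7,8)_{(u_3,v_7)}$-path). Replacing your reduction by a citation of that proposition closes the gap and makes your argument identical to the paper's.
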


\begin{lemma}
\label{12T2inCu4T1T2inCu3}
If $T_2\cup T_1\subseteq C(u_4)\cap C(u_3)$, then we are done.
\end{lemma}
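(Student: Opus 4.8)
The plan is to push the degree sum $\sum_{x\in N(u)\setminus\{v\}}d(x)=\sum_{i\in[1,4]}d(u_i)$ above the ceiling that $(A_{8.1})$--$(A_{8.4})$ impose (the bound following Eq.~(\ref{eq16})), using the master estimate $\sum_{i\in[1,4]}d(u_i)\ge 2\Delta+2+\sigma_{[1,4]}+\sigma_{[5,7]}+t_0$ recorded just before this lemma. The hypothesis $T_1\cup T_2\subseteq C(u_3)\cap C(u_4)$ makes every color of $T_1$ appear at $u_2,u_3,u_4$ (but not $u_1$) and every color of $T_2$ at $u_1,u_3,u_4$ (but not $u_2$), so $\mathrm{mult}_{S_u}(j)=3$ for each $j\in T_1\cup T_2$; hence $T_1\cup T_2\subseteq T_0$ and $t_0\ge t_1+t_2$. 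I will use alongside this the running facts $5\in S_u\setminus C(u_3)$, $6,7\in S_u$, $1\in S_u$, $\sigma_{[1,4]}\ge4$, $\sigma_{[5,7]}\ge3$, and $t_1,t_2\ge1$ (from \ref{1267inSu}, \ref{121Su1234ge4}, Lemmas~\ref{12T1ge1} and \ref{12T2ge1}).

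First I would split on $C_{12}$. If $C_{12}\subseteq C(u_3)\cup C(u_4)$, then every color of $C_{12}$ also has $\mathrm{mult}_{S_u}\ge3$, so $T_0=T_{uv}$ and $t_0=t_{uv}=\Delta-2$; the master estimate gives $\sum_{i\in[1,4]}d(u_i)\ge 2\Delta+2+4+3+(\Delta-2)=3\Delta+7$, which exceeds the $(A_8)$ ceiling for every $\Delta\ge6$, a contradiction, so we are done. Otherwise fix $8\in C_{12}\setminus(C(u_3)\cup C(u_4))$; then \ref{12C12notinCu3Cu4} raises the multiplicity total of $5,6,7$ to $\sigma_{[5,7]}\ge5$, and together with $\sigma_{[1,4]}\ge4$ and $t_0\ge t_1+t_2\ge2$ this yields $\sum_{i\in[1,4]}d(u_i)\ge 2\Delta+13$. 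For every $\Delta\ne7$ this already exceeds the corresponding $(A_8)$ ceiling, so only $\Delta=7$ survives.

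For $\Delta=7$ the last inequality must be an equality, which rigidifies the local picture: $t_1=t_2=1$, $c_{12}=t_{uv}-t_1-t_2=3$, $t_0=t_1+t_2=2$ (so $C_{12}\cap(C(u_3)\cup C(u_4))=\emptyset$ and hence $T_4=C_{12}$), $\sigma_{[1,4]}=4$ and $\sigma_{[5,7]}=5$. Feeding $8$ into \ref{12C12Cu3Cu4neemptyset} then gives $6\in C(u_4)$ with a $(6,8)_{(u_3,u_4)}$-path, and in the tight subpattern where $\mathrm{mult}_{S_u\setminus C(u_3)}(5)=1$ also $7\in C(u_3)$ with a $(7,8)_{(u_3,v_7)}$-path. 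Combined with the multiplicity and path bookkeeping of \ref{12686862}, \ref{12787872}, \ref{12686ge2} and \ref{12787ge2}, these constraints pin $W_1,W_2,W_3,W_4$ into a short list of rigid color patterns; in each I would recolor the edges incident to $u$ --- an $\overset{\divideontimes}\to$ assignment to $uv$, or a swap at $u$ that reduces to $|C(u)\cap C(v)|\le1$ --- verifying acyclicity through the forced bichromatic paths and Lemma~\ref{lemma06}, exactly as in the earlier parts of (2.5.1).

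The main obstacle is precisely this $\Delta=7$ equality case: the pure degree count stops one unit short, so the argument must descend to the exact color classes $W_i$ and exhibit a cycle-free recoloring for each rigid pattern. The delicate point is keeping the Kempe-chain conditions consistent --- the forced $(6,8)$- and $(7,8)$-paths at $u_3,u_4$ together with the maximality constraints supplied by Lemma~\ref{lemma06} --- while confirming that in every surviving pattern one of the colors $6,7$ (or a color of $C_{12}=T_4$) can be liberated to receive a swap that introduces no bichromatic cycle. All the $\Delta\ge8$ and $\Delta=6$ subcases, by contrast, follow immediately from the degree bound.
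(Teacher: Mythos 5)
Your two opening reductions are exactly the paper's and both are sound: the branch $C_{12}\subseteq C(u_3)\cup C(u_4)$ dies on the count $\sum_{i\in[1,4]}d(u_i)\ge 2\Delta+2+\sigma_{[1,4]}+\sigma_{[5,7]}+t_0\ge 2\Delta+9+t_{uv}=3\Delta+7$, and otherwise \ref{12C12notinCu3Cu4} lifts $\sigma_{[5,7]}$ to $5$ so that, with $\sigma_{[1,4]}\ge 4$ and $t_0\ge t_1+t_2\ge 2$, one gets $\sum_{i\in[1,4]}d(u_i)\ge 2\Delta+13$, leaving only $\Delta=7$ with every inequality tight. Your bookkeeping of the tight case ($t_1=t_2=1$, $c_{12}=3$, $C_{12}\cap(C(u_3)\cup C(u_4))=\emptyset$, $T_4=C_{12}$, $\sigma_{[1,4]}=4$, $\sigma_{[5,7]}=5$) is also correct.

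The gap is that you stop there. ``These constraints pin $W_1,\dots,W_4$ into a short list of rigid patterns; in each I would recolor'' is not a proof --- it is precisely where the lemma is actually established, and it is the bulk of the paper's argument. Concretely, what is still missing: first, the tight count leaves open whether $\mathrm{mult}_{S_u}(7)=1$, and the sub-case where additionally $H$ contains a $(4,7)_{(u_4,v_7)}$-path must be killed via \ref{12464614ge5} (which pushes $\sigma_{[1,4]}$ to $5$); the configuration $W_1=\{1,4,5\}$ must then be shown, via \ref{12686862}(1) and \ref{122notinu12323}(1), to force exactly that already-excluded sub-case, yielding $5\notin C(u_1)$. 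Second, the surviving situation splits on whether $H$ contains a $(7,8)_{(u_3,v_7)}$-path, and each branch pins $W_1,W_2$ differently (e.g.\ $W_2=\{2,a_2,5\}$ versus $W_1=\{1,4,2\}$, $W_2=\{2,4,7\}$) and demands its own recoloring --- $(uu_1,uu_2,uv)\overset{\divideontimes}\to(2,6,8)$, $(uu_1,uv)\overset{\divideontimes}\to(7,8)$, $(uu_1,uu_2,uv)\overset{\divideontimes}\to(5,6,8)$, or $(uu_1,uu_4,uv)\overset{\divideontimes}\to(T_1,1,T_4)$ --- whose validity depends on the specific forced $(4,5)$-, $(4,6)$-, $(3,6)$- and $(7,8)$-paths present in that branch, not on a generic appeal to Lemma~\ref{lemma06}. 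These recolorings are not interchangeable between branches, so the claim that every rigid pattern admits a cycle-free swap cannot be certified without writing each one out; until you do, the $\Delta=7$ equality case --- the only case the degree count does not settle --- remains open.
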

\begin{proof}
One sees that if $C_{12}\subseteq C(u_3)\cup C(u_4)$,
then $\sum_{x\in N(u)\setminus \{v\}}d(x)\ge 2\Delta + 2 + \sigma_{[1, 4]} + \sigma_{[5, 7]} + t_0\ge 2\Delta + 2 + 4 + 3 + t_{u v}\ge 2\Delta + 9 + \Delta - 2 = 3\Delta + 7$.
Hence, $C_{12}\setminus (C(u_3)\cup C(u_4))\ne \emptyset$, say $8\not\in C(u_3)\cup C(u_4)$.
It follows from \ref{12T4123} that $4\in B_1$ and $4\in C(u_2)\cup C(u_3)$,
from \ref{12T43167neemptyset}(2) that $6\in C(u_4)$ and $H$ contains a $(6, 8)_{(u_4, v_6)}$-path,
from \ref{124i4i3j3j}(2) that $H$ contains a $(4, 5/6/7)_{(u_4, v)}$-path.
One sees from \ref{12C12notinCu3Cu4} that $\sigma_{[5, 7]}\ge 5$.
Hence, $\sum_{x\in N(u)\setminus \{v\}}d(x)\ge 2\Delta + 2 + \sigma_{[1, 4]} + \sigma_{[5, 7]} + t_0\ge 2\Delta + 2 + 4 + 5 + t_0 = 2\Delta + 11 + t_0\ge 2\Delta + 11 + t_1 + t_2 = 2\Delta + 13$.
It follows that $\sigma_{[1, 4]} = 4$, $t_1 = t_2 = 1$, $\sigma_{[5, 7]} = 5$,
and mult$_{S_u}(6) = 2$.
One sees from \ref{12464614ge5} that if mult$_{S_u}(7) = 1$ and $H$ contains a $(4, 7)_{(u_4, v_7)}$-path,
then $\sigma_{[1, 4]} \ge 5$ and we are done.
Otherwise, mult$_{S_u}(7) \ge 2$ or $H$ contains no $(4, 7)_{(u_4, v_7)}$-path.

When $W_1 = \{1, 4, 5\}$, it follows from \ref{12686862}(1) that $H$ contains a $(4, 6)_{(u_1, u_4)}$-path,
and from \ref{122notinu12323}(1) that $5\in C(u_2)$ or $H$ contains a $(4, 5)_{(u_2, u_4)}$-path.
Hence, mult$_{S_u}(7) = 1$ and $H$ contains a $(4, 7)_{(u_4, v_7)}$-path, a contradiction.
In the other case, $5\not\in C(u_1)$.

When $H$ contains no $(7, 8)_{(u_3, v_7)}$-path,
it follows from Corollary~\ref{auv2}(4) that $5\in B_2\subseteq C(u_2)\cap C(u_4)$, $W_2 = \{2, a_2, 5\}$,
and $H$ contains a $(4, 5)_{(u_1, u_4)}$-path.
Then mult$_{S_u}(7) = 1$.
We further assume that $H$ contains a $(4, 6)_{(u_4, v_6)}$-path,
and it follows from \ref{12686862}(1) that $6\in C(u_1)$ or $H$ contains a $(3, 6)_{(u_1, u_3)}$-path.
If $H$ contains no $(2, 3)_{(u_1, u_2)}$-path, then $(u u_1, u u_2, u v)\overset{\divideontimes}\to (2, 6, 8)$.
Otherwise, $3\in C(u_1)$ and $H$ contains a $(3, \{2, 6\})_{(u_1, u_3)}$-path.
Thus, $\sigma_{[1, 4]} \ge |\{3, 4, a_2, 2, 1\}| = 5$.

Now assume $H$ contains a $(7, 8)_{(u_3, v_7)}$-path.
It follows from Lemma~\ref{12787ge2} that $7\in S_u\setminus C(u_3)$ and mult$_{S_u\setminus C(u_3)}(5) = 1$.
If $5\in C(u_2)$, since $W_1 = \{1, 2, 4\}$, $W_2 = \{2, 5, a_2\}$,
and it follows from \ref{12686862}(1) that $H$ contains a $(4, 6)_{(u_1, u_4)}$-path,
and then a $(4, 7)_{(u_4, v_7)}$-path,
then $(uu_1, uv)\overset{\divideontimes}\to (7, 8)$
\footnote{Or, by \ref{12787872}(1), we are done. }.
Otherwise, $5\in C(u_4)$ and $H$ contains a $(4, 5)_{(u_1, u_4)}$-path.
It follows from \ref{122notinu12323}(1) that $2\in C(u_1)$ or $H$ contains a $(3, 2)_{(u_1, u_3)}$-path.
Since $\sigma_{[1, 4]} = 4$, $W_1 = \{1, 4, 2\}$.
If $H$ contains no $(4, 6)_{(u_1, u_4)}$-path,
then $(u u_1, u u_2, u v)\overset{\divideontimes}\to (5, 6, 8)$.
Otherwise, $H$ contains a $(4, 6)_{(u_1, u_4)}$-path, a $(4, 7)_{(u_1, u_4)}$-path,
and it follows from \ref{12787872}(1) that $W_2 = \{2, 4, 7\}$.
Thus, $(u u_1, u u_4, u v)\overset{\divideontimes}\to (T_1, 1, T_4)$
\footnote{Or, by Corollary~\ref{auv2d(u)5}(7.2), we are done. }.
\end{proof}

Hence, $(T_1\cup T_2)\setminus C(u_3)\ne \emptyset$.
It follows from Corollary~\ref{auv2}(5.2) and (3) that $3\in B_1\cup B_2$,
and for each $i\in [1, 2]$, $3\in C(u_i)$ or $H$ contains a $(4, 3)_{(u_i, u_4)}$-path.
And for each $j\in [1, 2]$ such that $T_j\setminus C(u_3)\ne \emptyset$, $4\in C(u_j)$ and $H$ contains a $(4, T_j)_{(u_j, u_4)}$-path.

Assume that $2\not\in C(u_1)$ and $H$ contains no $(2, 3)_{(u_1, u_3)}$-path.
One sees from \ref{122notinu12323}(1) that $5\in C(u_2)$ or $H$ contains a $(4, 5)_{(u_2, u_4)}$-path,
from \ref{122notinu12323}(2) that $1\in C(u_2)$ or $H$ contains a $(1, 3/4)_{(u_2, u)}$-path,
and from \ref{125notinu1}(1) that $5\in C(u_1)$ or $H$ contains a $(4, 5)_(u_1, u_4)$-path.
It follows that mult$_{S_u\setminus C(u_3)}(5)\ge 2$.
If $6\not\in C(u_1)\cup C(u_2)\cup C(u_4)$,
then by \ref{126notCu1Cu2}, we are done.
Otherwise, $6, 7\in C(u_1)\cup C(u_2)\cup C(u_4)$.
One sees that $d(u_1) + d(u_2) + d(u_4)\ge 4 + |\{4, 3, 3, 5, 5, 6, 7\}| + 2t_{uv} = 2\Delta + 7$.
It follows that mult$_{C(u_1)\cup C(u_2)}(4) = 1$, mult$_{S_u)}(3) = 2$,
$3\in C(u_2)$ and $H$ contains a $(1, 3)_{(u_1, u_3)}$-path,
and from Corollary~\ref{auv2d(u)5}(7.2) that $3\in C(u_1)$, $T_1\subseteq C(u_3)$.
One sees from \ref{12T4123}(3) that $4\in B_1$.
Thus, $T_2\subseteq C(u_3)$, a contradiction.
In the other case, we proceed with the following proposition, or otherwise we are done:
\begin{proposition}
\label{122inCu1Cu3}
$2\in C(u_1)$ or $H$ contains a $(2, 3)_{(u_1, u_3)}$-path.
\end{proposition}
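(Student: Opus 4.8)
The plan is to argue by contradiction with the standard colour-extension ($\overset{\divideontimes}\to$) and degree-counting template used for the neighbouring propositions in Case~(2.5.1). Throughout we are operating under the standing hypothesis $(T_1\cup T_2)\setminus C(u_3)\ne\emptyset$, and the negation of the claim reads: $2\notin C(u_1)$ and $H$ contains no $(2,3)_{(u_1,u_3)}$-path. Under this negation I would show that one can either extend the colouring to $G$ (so we are done) or reach a direct contradiction with the standing hypothesis.

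First I would harvest the immediate consequences of the assumption from the path-propositions already available here. From \ref{122notinu12323}(1) we get that $5\in C(u_2)$ or $H$ contains a $(4,5)_{(u_2,u_4)}$-path; from \ref{122notinu12323}(2) that $1\in C(u_2)$ or $H$ contains a $(1,3/4)_{(u_2,u)}$-path; and from \ref{125notinu1}(1) that $5\in C(u_1)$ or $H$ contains a $(4,5)_{(u_1,u_4)}$-path. The two assertions about colour $5$ combine to give $\mbox{mult}_{S_u\setminus C(u_3)}(5)\ge 2$. I would then split on whether $6$ occurs among $u_1,u_2,u_4$: if $6\notin C(u_1)\cup C(u_2)\cup C(u_4)$, then \ref{126notCu1Cu2} finishes the colouring outright, so the only surviving case is $6,7\in C(u_1)\cup C(u_2)\cup C(u_4)$.

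In the surviving case I would run the degree count. With $t_{uv}=\Delta-2$, the colours forced into the three neighbourhoods give
\[
d(u_1)+d(u_2)+d(u_4)\ge 4+|\{4,3,3,5,5,6,7\}|+2t_{uv}=2\Delta+7,
\]
and comparing this with the ceiling the $(A_8)$ hypothesis places on $\sum_{i\in[1,4]}d(u_i)$ forces equality throughout. Tightness then pins down $\mbox{mult}_{C(u_1)\cup C(u_2)}(4)=1$ and $\mbox{mult}_{S_u}(3)=2$, and in particular forces $3\in C(u_2)$ together with a $(1,3)_{(u_1,u_3)}$-path. From here Corollary~\ref{auv2d(u)5}(7.2) supplies $3\in C(u_1)$ and $T_1\subseteq C(u_3)$, and \ref{12T4123}(3) supplies $4\in B_1$; propagating these yields $T_2\subseteq C(u_3)$ as well.

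The contradiction then lands cleanly: $T_1\subseteq C(u_3)$ and $T_2\subseteq C(u_3)$ together give $T_1\cup T_2\subseteq C(u_3)$, contradicting the standing hypothesis $(T_1\cup T_2)\setminus C(u_3)\ne\emptyset$. I expect the main obstacle to be the tight multiplicity bookkeeping in the third step: verifying that the degree sum is exactly $2\Delta+7$ and that this equality genuinely forces every listed colour into its asserted neighbourhood, in particular deducing $4\in B_1$ and pushing it to $T_2\subseteq C(u_3)$, since a single unit of slack anywhere in the count would break the deduction. The remaining steps are routine applications of the already-established propositions \ref{122notinu12323}, \ref{125notinu1}, \ref{126notCu1Cu2}, \ref{12T4123}, and Corollary~\ref{auv2d(u)5}.
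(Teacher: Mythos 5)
Your proposal is correct and follows essentially the same route as the paper's own argument: the same opening reductions via \ref{122notinu12323}, \ref{125notinu1} and \ref{126notCu1Cu2}, the same count $d(u_1)+d(u_2)+d(u_4)\ge 4+|\{4,3,3,5,5,6,7\}|+2t_{uv}=2\Delta+7$ with tightness forcing $\mbox{mult}_{C(u_1)\cup C(u_2)}(4)=1$ and $\mbox{mult}_{S_u}(3)=2$, and the same finish via Corollary~\ref{auv2d(u)5}(7.2) and \ref{12T4123}(3) to get $T_1\cup T_2\subseteq C(u_3)$, contradicting the standing hypothesis. No substantive differences.
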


Assume that $C_{12}\subseteq C(u_3)\cup C(u_4)$, i.e., $T_4\subseteq C(u_3)$.
Then $t_0\ge c_{12} = t_{u v} - (t_1 + t_2)\ge t_{uv} - 3 = \Delta - 5$.
One sees that $\sum_{x\in N(u)\setminus \{v\}}d(x)\ge 2\Delta + 2 + \sigma_{[1, 4]} + \sigma_{[5, 7]} + t_0 \ge 2\Delta + 2 + |\{4, 3, 3, 1, 2\}| + |[5, 7]| + t_0 = 2\Delta + 10 + t_0$.
When mult$_{S_u}(4)\ge 2$, $\sum_{x\in N(u)\setminus \{v\}}d(x)\ge 2\Delta + 10 + |\{4\}| + t_0 \ge 2\Delta + 11 + c_{12} \ge 2\Delta + 11 + \Delta - 5 = 3\Delta + 6$.
Hence, $t_0 = c_{12} = \Delta - 5$, $(T_1\cup T_2)\cap C(u_3) = \emptyset$,
and mult$_{S_u}(i) =  $mult$_{S_u\setminus C(u_3)}(5) = 1$, $i = 6, 7$.
It follows from Lemma~\ref{12686ge2} and Lemma~\ref{12787ge2} that $5\in C(u_4)$,
and from \ref{125notinu1}(1) that $H$ contains a $(4, 5)_{(u_1, u_4)}$-path.
Thus, $(uu_2, uv)\to (5, T_1)$
\footnote{Or, by \ref{125notu2514}(1), we are done. }.
In the other case, mult$_{S_u}(4) = 1$.
It follows that $3\in C(u_4)$ and $T_i\subseteq C(u_i)$ for some $i\in [1, 2]$ such that $4\not\in C(u_i)$.
One sees from \ref{12T43167neemptyset}(2) that $5/6/7\in C(u_4)$.
Hence, $t_0\ge c_{12} + t_i\ge t_4 + t_i\ge 2 + 1 = 3$,
and $\sum_{x\in N(u)\setminus \{v\}}d(x)\ge 2\Delta + 10 + t_0 \ge 2\Delta + 10 + 3 = 2\Delta + 13$.
It follows that $t_0 = 3$, $c_{12} = t_4 = 2$, $t_i = 1$,
and mult$_{S_u}(i) =  $mult$_{S_u\setminus C(u_3)}(5) = 1$, $i = 6, 7$.
It follows from Lemma~\ref{12686ge2} and Lemma~\ref{12787ge2} that $5\in C(u_4)$,
and from \ref{125notinu1}(1) that $H$ contains a $(4, 5)_{(u_1, u_4)}$-path, $4\in C(u_1)$.
Thus, it follows from \ref{125notu2514}(1) that $T_1\subseteq C(u_3)$.
Since $4\not\in C(u_2)$, $T_2\subseteq C(u_3)$, a contradiction.
Hence, we proceed with the following proposition, or otherwise we are done:
\begin{proposition}
\label{12CunotinCu3}
$C_{12}\setminus (C(u_3)\cup C(u_4))\ne \emptyset$.
\end{proposition}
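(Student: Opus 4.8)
The plan is to establish this final proposition by contradiction: assume $C_{12}\subseteq C(u_3)\cup C(u_4)$ and show that $G$ then admits an acyclic edge $(\Delta+5)$-coloring, or that the degree sum $\sum_{x\in N(u)\setminus\{v\}}d(x)$ violates the ceiling $3\Delta+6$ available for configurations $(A_{8.1})$--$(A_{8.4})$ via Eq.~(\ref{eq16}). Since we have already reduced to the subcase $T_2\subseteq C(u_4)$ and always have $T_1\subseteq B_2\subseteq C(u_2)\cap C(u_4)$, the assumption $C_{12}\subseteq C(u_3)\cup C(u_4)$ is precisely the statement $T_4\subseteq C(u_3)$. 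Consequently every token of $T_{uv}$ missing from $C(u_4)$ is absorbed by $C(u_3)$, so $t_0\ge c_{12}=t_{uv}-(t_1+t_2)\ge\Delta-5$. Feeding this into the token decomposition Eq.~(\ref{eq17}), together with the lower bounds $\sigma_{[1,4]}\ge4$ and $\sigma_{[5,7]}\ge3$ secured in \ref{121Su1234ge4}, I obtain $\sum_{x\in N(u)\setminus\{v\}}d(x)\ge2\Delta+2+\sigma_{[1,4]}+\sigma_{[5,7]}+t_0\ge2\Delta+10+t_0$.

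I would then branch on $\mathrm{mult}_{S_u}(4)$. When $\mathrm{mult}_{S_u}(4)\ge2$, the extra occurrence of $4$ raises the bound to $2\Delta+11+c_{12}\ge3\Delta+6$; this strictly exceeds the $(A_8)$ ceiling once $\Delta\ge8$ and meets it with equality for $\Delta\in\{6,7\}$, so the whole chain is forced tight, giving $t_0=c_{12}=\Delta-5$, $(T_1\cup T_2)\cap C(u_3)=\emptyset$, and $\mathrm{mult}_{S_u}(6)=\mathrm{mult}_{S_u}(7)=\mathrm{mult}_{S_u\setminus C(u_3)}(5)=1$. The contrapositives of Lemmas~\ref{12686ge2} and \ref{12787ge2} then forbid $(6,\cdot)_{(u_4,v_6)}$- and $(7,\cdot)_{(u_4,v_7)}$-paths, which forces $5\in C(u_4)$, after which \ref{125notinu1}(1) supplies a $(4,5)_{(u_1,u_4)}$-path; recoloring $(uu_2,uv)\to(5,T_1)$ now yields a valid coloring, or else \ref{125notu2514}(1) delivers the contradiction directly. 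When $\mathrm{mult}_{S_u}(4)=1$, one first obtains $3\in C(u_4)$ together with $T_i\subseteq C(u_i)$ for some $i\in\{1,2\}$ with $4\notin C(u_i)$, and \ref{12T43167neemptyset}(2) gives $5/6/7\in C(u_4)$; hence $t_0\ge c_{12}+t_i\ge t_4+t_i\ge3$, forcing $\sum\ge2\Delta+13$ and thus the equalities $t_0=3$, $c_{12}=t_4=2$, $t_i=1$, again with unit multiplicities for $5,6,7$. Lemmas~\ref{12686ge2}, \ref{12787ge2} and \ref{125notinu1}(1) then place $5\in C(u_4)$, a $(4,5)_{(u_1,u_4)}$-path, and $4\in C(u_1)$, whence \ref{125notu2514}(1) forces $T_1\subseteq C(u_3)$; since $4\notin C(u_2)$ this also yields $T_2\subseteq C(u_3)$, contradicting the standing hypothesis $(T_1\cup T_2)\setminus C(u_3)\ne\emptyset$ established just before Lemma~\ref{12T2inCu4T1T2inCu3}.

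Both branches therefore terminate, so $C_{12}\subseteq C(u_3)\cup C(u_4)$ is impossible and $C_{12}\setminus(C(u_3)\cup C(u_4))\ne\emptyset$. The delicate point, as everywhere in Case~2.5, is that the degree inequality is tight to within a single token: I must track the counting identity $t_0\ge c_{12}=t_{uv}-(t_1+t_2)$ and each of $\sigma_{[1,4]}$, $\sigma_{[5,7]}$, $\mathrm{mult}_{S_u}(4)$ exactly, since a slack of even one unit breaks the forcing. The main obstacle will be verifying that the bichromatic-path conditions demanded simultaneously by \ref{125notinu1}(1), \ref{125notu2514}(1), \ref{12686ge2} and \ref{12787ge2} are mutually consistent under these equalities, so that the recoloring $(uu_2,uv)\to(5,T_1)$ introduces no new bichromatic cycle and the forced inclusions $T_1,T_2\subseteq C(u_3)$ genuinely hold.
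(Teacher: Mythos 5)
Your proposal reproduces the paper's own proof of \ref{12CunotinCu3} essentially step for step: the same reformulation of the hypothesis as $T_4\subseteq C(u_3)$, the same estimate $t_0\ge c_{12}=t_{uv}-(t_1+t_2)\ge\Delta-5$, the same case split on $\mathrm{mult}_{S_u}(4)$, and the same terminal moves via Lemmas~\ref{12686ge2} and \ref{12787ge2}, \ref{125notinu1}(1), \ref{125notu2514}(1) and the recoloring $(uu_2,uv)\to(5,T_1)$. So there is nothing genuinely different in the route.

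One bookkeeping point does need repair, and since (as you yourself note) the whole chain is tight to a single token it is not cosmetic. You derive $\sum_{x\in N(u)\setminus\{v\}}d(x)\ge 2\Delta+10+t_0$ from $\sigma_{[1,4]}\ge 4$ (citing \ref{121Su1234ge4}) together with $\sigma_{[5,7]}\ge 3$; those bounds only give $2\Delta+9+t_0$, and with that start the $\mathrm{mult}_{S_u}(4)\ge 2$ branch yields only $3\Delta+5$, which at $\Delta=7$ equals $2\Delta+12<2\Delta+13$ and forces none of the tightness you then exploit. The missing unit is available locally rather than from \ref{121Su1234ge4}: because $(T_1\cup T_2)\setminus C(u_3)\ne\emptyset$ at this stage, Corollary~\ref{auv2}(5.2) and (3) force $\mathrm{mult}_{S_u}(3)\ge 2$, and \ref{122inCu1Cu3} gives $2\in C(u_1)\cup C(u_3)$, so in fact $\sigma_{[1,4]}\ge|\{1,3,3,4,2\}|=5$ here; that is the bound the computation actually needs. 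With this correction both of your branches close exactly as intended. (A small further slip: ``$T_i\subseteq C(u_i)$ for some $i\in\{1,2\}$ with $4\notin C(u_i)$'' should read $T_i\subseteq C(u_3)$; as written it would force $T_i=\emptyset$, contradicting $t_i\ge 1$.)
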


Assume w.l.o.g. $8\not\in C(u_3)\cup C(u_4)$ and $H$ contains a $(6, 8)_{(u_4, v_6)}$-path.
It follows from Lemma~\ref{12686ge2} that $6\in S_u\setminus C(u_4)$,
from \ref{12C12notinCu3Cu4} that $\sigma_{[5, 7]}\ge |\{5, 6, 6, 7, 5/7\}|\ge 5$,
and from \ref{12T4123}(3) that $4\in B_1$ and mult$_{S_u}(4)\ge 2$,
from \ref{124i4i3j3j}(2) that $H$ contains a $(4, 5/6/7)_{(u_4, v)}$-path.
Hence, $\sum_{x\in N(u)\setminus \{v\}}d(x)\ge 2\Delta + 2 + \sigma_{[1, 4]} + \sigma_{[5, 7]} + t_0\ge 2\Delta + 2 + |\{4, 4, 3, 3, 1, 2\}| + |\{5, 6, 6, 7, 5/7\}| + t_0 = 2\Delta + 13 + t_0$.
It follows that $T_0 = \emptyset$, $4\in C(u_1)\cap C(u_2)$,
and $H$ contains a $(4, T_i)_{(u_, u_4)}$-path, $i = 1, 2$.

Let
\[
S'_u = \biguplus_{i\in \{1, 2, 4\}}(C(u_i)\setminus \{c(u u_i)\}).
\]
One sees from $d(u_1) + d(u_2) + d(u_4)\ge 4 + |\{4, 4, 3, 3, 5, 6, 5/7\}| + 2t_{uv} = 2\Delta + 7$.
Hence, $d(u_i) = 7$, $i = 1, 2, 4$, and $d(u_3)\le 6$, $1, 2\in C(u_3)$.
It follows from Corollary~\ref{auv2d(u)5}(7.2) that $3\in C(u_4)$, $H$ contains a $(1, 3)_{(u_3, u_4)}$-path,
and from \ref{122inCu1Cu3} that $3\in C(u_1)$.

One sees from \ref{125notu2514} that $5\in C(u_2)$ or $H$ contains a $(4, 5)_{(u_2, u_4)}$-path.
If $5\not\in C(u_1)$, and $H$ contains no $(4, 5)_{(u_1, u_4)}$-path,
then $(u u_1, u u_2, u v)\overset{\divideontimes}\to (5, 1, T_2)$.
Otherwise, $5\in C(u_1)$, or $H$ contains a $(4, 5)_{(u_1, u_4)}$-path.
It follows that mult$_{S'_u}(5) = 2$.
One sees
Then $d(u_1) + d(u_4)\ge t_{u v} + t_2 + |\{1, 4, 3\}| + |\{2, 3, 4, 6\}| + |\{5\}| = \Delta + 6 + t_2\ge \Delta + 6 + 1 = \Delta + 7$.
It follows that $t_2 = 1$, $W_2 = \{2, 4, 5\}$, $W_1 \biguplus W_4 = \{1, 4, 3\}\cup \{2, 3, 4, 6\}\cup \{5\}$,
and $C(u_3) = [1, 3]\cup [5, 7]$.
One sees that $5\in C(u_4)$ or $H$ contains a $(4, 5)_{(u_1, u_4)}$-path.
It follows that $H$ contains a $(4, 6)_{(u_4, v)}$-path.
If $H$ contains no $(6, j)_{(u_2, v_6)}$-path, then $(u u_2, u v)\overset{\divideontimes}\to (6, j)$.
Otherwise, $H$ contains a $(6, T_1)_{(u_2, v_6)}$-path.
Since $\{1, 3, 4\}\cup T_2\cup T_4\subseteq C(v_1)$, $T_1\setminus C(v_1)\ne \emptyset$.
It follows from Lemma~\ref{auvge2}(1.2) that $H$ contains a $(7, T_1\cap R_1)_{(u_3, v_1)}$-path,
and from Lemma~\ref{12787ge2} that $7\in S'_u$, a contradiction.

This finishes the inductive step for the Case 2.5.1.

{(2.5.2.)} $c(vu_4) = 6$.
One sees from \ref{123v13v2}(1)--(2) that for each $j\in [3, 4]$,
$j\in C(v_i)$ or $H$ contains a $(j, l)_{(v_1, v_l)}$-path for some $l\in C(v)\setminus \{i\}$, $i\in [1, 2]$.
Hence, together with Corollary~\ref{auv2d(u)5}(2.1),
we have the following proposition, or otherwise we are done:
\begin{proposition}
\label{12outRineemptyset}
\begin{itemize}
\parskip=0pt
\item[{\rm (1)}]
	$R_1\cup R_2\ne \emptyset$.
\item[{\rm (2)}]
	If $R_i = \emptyset$, $i\in [1, 2]$,
    then $C(v_i) = \{i, 7\}\cup T_{uv}$ and $H$ contains a $(7, [3, 4])_{(v_1, v_7)}$-path.
\item[{\rm (3)}]
    If $2\not\in C(u_1)$, $H$ contains no $(2, i)_{(u_1, u_i)}$-path for each $i\in [3, 4]$,
    then $1\in C(u_2)$ or $H$ contains a $(1, i)_{(u_2, u_i)}$-path for some $i\in [3, 4]$.
\end{itemize}
\end{proposition}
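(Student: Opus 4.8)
The plan is to read all three parts off the display that immediately precedes the statement. That display, obtained from \ref{123v13v2}(1)--(2), records the only non-winning alternative: for each $j\in\{3,4\}$ and each $i\in\{1,2\}$, either $j\in C(v_i)$ or $H$ carries a $(j,l)_{(v_i,v_l)}$-path for some $l\in C(v)\setminus\{i\}$ (if neither held, then $vv_i\to j$ would be a safe recolouring that sends us into Case~(2.4), i.e.\ we would be done). The engine I would combine with this is Corollary~\ref{auv2d(u)5}(2.1), together with the degree bound $d(v_i)\ge t_{uv}+1$ that it supplies, and Lemma~\ref{lemma06} to certify that the Kempe swaps I perform introduce no new bichromatic cycle. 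Recall here that $a_{uv}=2$, $A_{uv}=\{1,2\}$, $C(v)=\{1,2\}\cup[5,7]$, $t_{uv}=\Delta-2$, with $c(vu_3)=5$, $c(vu_4)=6$ and $c(vv_7)=7$.

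For part~(1) I would argue by contradiction: suppose $R_1=R_2=\emptyset$, so $\{i\}\cup T_{uv}\subseteq C(v_i)$ and hence $d(v_i)\ge t_{uv}+1=\Delta-1$ for $i=1,2$. Thus each $v_i$ has at most one colour outside $\{i\}\cup T_{uv}$. Feeding this one-spare-slot bound into the preceding display forces, for $j\in\{3,4\}$, that either $j$ occupies the single spare slot of $v_i$ or a $(j,l)_{(v_i,v_l)}$-path witnesses the obstruction; since two colours $3,4$ cannot both sit in one slot, the colour sets $C(v_1),C(v_2)$ become rigidly determined. At that point I would recolour $vv_1$ (or $vv_2$) to a colour of $\{3,4\}$ and then assign $uv$ a colour in $T_{uv}\setminus(B_1\cup B_2)$ (or reduce to Case~(2.4)), contradicting the standing assumption that we are not done; hence $R_1\cup R_2\ne\emptyset$.

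For part~(2) I would assume $R_i=\emptyset$, again giving $\{i\}\cup T_{uv}\subseteq C(v_i)$ and the lone spare slot. The display rules out $3$ and $4$ from being that spare colour unless a path-witness exists, and the only remaining possibility consistent with $C(v)\subseteq\{1,2\}\cup[5,7]$ that does not let $vv_i$ be safely recoloured is colour $7$; this pins $C(v_i)=\{i,7\}\cup T_{uv}$. The accompanying $(7,3)_{(v_i,v_7)}$- and $(7,4)_{(v_i,v_7)}$-paths are then exactly the path-witnesses the display hands us, since with $C(v_i)$ saturated the colours $3,4$ can only be reached from $v_i$ through the edge coloured $7$. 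Part~(3) is a swap at $u$: if $2\notin C(u_1)$ and there is no $(2,i)_{(u_1,u_i)}$-path for $i\in\{3,4\}$, then recolouring $uu_1\to 2$ is safe by Lemma~\ref{lemma06} and yields a colouring attaining $C(u)\cap C(v)$ through $uu_1$ in the role of the common colour $2$; Corollary~\ref{auv2d(u)5}(2.1), applied with the symmetric roles of $1$ and $2$, then forces $1\in C(u_2)$ or a $(1,i)_{(u_2,u_i)}$-path, for otherwise the symmetric swap $uu_2\to 1$ also succeeds and reduces $a_{uv}$ to $1$, so we are done.

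The step I expect to be the main obstacle is the rigidity argument in parts~(1) and~(2): one must verify that the degree/slot counting at $v_1$ and $v_2$, together with the path witnesses, truly leaves no escape — that the forced colour patterns genuinely admit either a colour for $uv$ or a reduction, with \emph{every} recolouring checked against Lemma~\ref{lemma06} so that no overlooked bichromatic cycle reappears. Keeping precise track of which swaps are safe, and of the interaction between the spare slot at $v_i$ and the colours $5,6,7$ already committed to $u_3,u_4,v_7$, is where the care lies; the remaining computations are routine once that bookkeeping is fixed.
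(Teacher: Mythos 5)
Your overall plan --- read the three parts off the display derived from \ref{123v13v2} and close with Corollary~\ref{auv2d(u)5}(2.1) --- is indeed the paper's route, but the execution breaks at several concrete points. For part~(3): the move ``$uu_1\to 2$'' is not a proper recoloring, since here $c(uu_1)=1$ and $c(uu_2)=2$, so recoloring $uu_1$ to $2$ puts two edges of colour $2$ at $u$. The only legitimate move is the simultaneous swap $(uu_1,uu_2)\to(2,1)$, whose safety requires the hypothesis of (3) \emph{together with} the failure of its conclusion; and this swap does not ``reduce $a_{uv}$ to $1$'' --- afterwards $C(u)\cap C(v)$ is still $\{1,2\}$. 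What it does is place the coloring in the hypothesis of Corollary~\ref{auv2d(u)5}(2.1) (the common colours $1,2$ are now carried by $uu_2,uu_1$), whose conclusion $T_{uv}\subseteq C(v_1)\cap C(v_2)$ contradicts part~(1). Without that chain your part~(3) has no valid endgame.

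For parts~(1)--(2) the rigidity argument is asserted but not closed. First, your proposed contradiction ``recolour $vv_1$ to a colour of $\{3,4\}$ and then assign $uv$ a colour in $T_{uv}\setminus(B_1\cup B_2)$'' cannot work: the recoloring $vv_1\to j$, $j\in\{3,4\}$, is exactly what the display says is obstructed, and $T_{uv}\setminus(B_1\cup B_2)=\emptyset$ by \ref{prop3001}. Second, pinning the single spare colour of $C(v_i)$ to $7$ requires checking each candidate $l\in C(v)\setminus\{i\}$ against \emph{both} $j=3$ and $j=4$: for $l=5$ the $(3,5)_{(v_1,v_5)}$-path must end at $v_5=u_3$ along the unique colour-$3$ edge $uu_3$ and then continue from $u$ along colour $5\notin C(u)=[1,4]$, which is impossible; symmetrically $l=6$ fails for $j=4$; but $l=2$ (resp.\ $l=1$ for $v_2$) survives both termination tests and is not excluded by anything you wrote --- it needs a separate argument that your proof does not supply. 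Finally, the actual content of part~(1) is that if both $R_1$ and $R_2$ were empty, the forced $(7,3)_{(v_1,v_7)}$- and $(7,3)_{(v_2,v_7)}$-paths would both terminate on the unique colour-$3$ edge at $v_7$ and hence coincide, giving $v_1=v_2$; that coincidence argument is absent from your write-up.
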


By \ref{126notinu3} and \ref{12T3124},
we have the following proposition, or otherwise we are done:
\begin{proposition}
\label{12out6notinu3}
Assume that $6\not\in C(u_3)$. Then
\begin{itemize}
\parskip=0pt
\item[{\rm (1)}]
	if $6\not\in C(u_1)$, $H$ contains no $(6, 2)_{(u_2, u_3)}$-path,
    then $4\in C(u_1)$, $H$ contains a $(4, T_1)_{(u_1, u_4)}$-path;
\item[{\rm (2)}]
	if $6\not\in C(u_2)$, $H$ contains no $(6, 1)_{(u_1, u_3)}$-path,
    then $4\in C(u_2)$, $H$ contains a $(4, T_2)_{(u_2, u_4)}$-path;
\item[{\rm (3)}]
	if $H$ contains no $(6, 1)_{(u_3, u_1)}$-path,
    then $3\in C(u_2)$ or $H$ contains a $(3, 1/4)_{(u_2, u)}$-path;
\item[{\rm (4)}]
	if $H$ contains no $(6, 2)_{(u_3, u_2)}$-path,
    then $3\in C(u_1)$ or $H$ contains a $(3, 2/4)_{(u_1, u)}$-path;
\item[{\rm (5)}]
	if $H$ contains no $(6, i)_{(u_3, u_i)}$-path for each $i\in [1, 2]$,
    then $3\in B_1\cup B_2$, mult$_{S_u}(3)\ge 2$, and for each $i\in[1, 2]$, $3\in C(u_i)$,
    or $H$ contains a $(3, 4)_{(u_i, u_4)}$-path, $4\in C(u_i)$, $3\in C(u_4)$.
\end{itemize}
\end{proposition}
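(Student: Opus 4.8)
The plan is to obtain all five parts as direct specialisations of the already-proved \ref{126notinu3} and \ref{12T3124} to the single colour $j=6$, so that the genuinely new work amounts to checking that this specialisation is admissible in the present sub-case (2.5.2). Recall the standing data here: $A_{u v}=\{1,2\}$, $C(u)=[1,4]$, $C(v)=\{1,2\}\cup[5,7]$, $T_{u v}=[8,\Delta+5]$, and $c(v u_4)=6$, so that $6\in C(v)$ and $6\in C(u_4)$ while $3,4\notin C(v)$; the hypothesis $6\notin C(u_3)$ places $6$ in $[6,7]\setminus C(u_3)$, which is exactly the range for which the two cited propositions are phrased.

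First I would isolate the one observation that is not mere citation: $H$ contains no $(6,4)_{(u_4,u_3)}$-path. Because $c(\cdot)$ is proper, $u_4$ is incident with a unique edge of colour $6$, namely $v u_4$; hence any maximal $(6,4)$-alternating path issuing from $u_4$ must begin $u_4 v$ and then seek an edge of colour $4$ at $v$. As $4\notin C(v)$, that path stops at $v\neq u_3$, so no $(6,4)_{(u_4,u_3)}$-path can reach $u_3$. This single fact does all the lifting: it is precisely the standing hypothesis of \ref{12T3124} for $j=6$, and it renders the ``$i=4$'' alternative in the path conditions of \ref{126notinu3} vacuous. That is why parts (1) and (2) need only rule out the companion $(6,2)$- and $(6,1)$-paths rather than also a $(6,4)$-path.

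With this in place I would simply read off the statement. The recolouring engine behind both cited propositions is the swap $u u_3\to 6$, which is legitimate under the stated hypotheses: the would-be $(6,1)$- and $(6,2)$-cycles are excluded by $6\notin C(u_1)$ (resp.\ by the absence of the relevant $(6,2)$-path), the $(6,4)$-cycle is excluded by the observation above, and $6\notin C(u_3)\cup C(u)$ makes the swap proper. After the swap one attempts to free colour $1$ (resp.\ $2$) at $u$ by $u u_1\to l$ (resp.\ $u u_2\to l$) with $l\in T_1$ (resp.\ $l\in T_2$); this either lowers $a_{u v}$ and finishes, or forces $4\in C(u_1)$ (resp.\ $4\in C(u_2)$) together with the asserted $(4,T_1)_{(u_1,u_4)}$-path (resp.\ $(4,T_2)_{(u_2,u_4)}$-path) --- this is exactly \ref{126notinu3}(1),(2) and yields parts (1),(2). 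Parts (4),(3),(5) are \ref{12T3124}(1),(2),(3) verbatim for $j=6$, now applicable because their no-$(6,4)$-path hypothesis is automatic; the ``or otherwise we are done'' clause tracks the same reductions, and the conclusions $3\in B_1\cup B_2$, $\mathrm{mult}_{S_u}(3)\ge 2$, and the $(3,4)_{(u_i,u_4)}$-path alternatives are inherited unchanged.

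I expect the only delicate point --- and the step I would verify most carefully --- to be the claim that no $(6,4)_{(u_4,u_3)}$-path exists, i.e.\ the bookkeeping that $u_4$'s unique colour-$6$ edge is $v u_4$ and that $4\notin C(v)$. Everything downstream is a transcription of \ref{126notinu3} and \ref{12T3124}, so no fresh discharging or case analysis is required; the entire risk lies in correctly matching each of the five conclusions to its parent part and in confirming that $c(v u_4)=6$ is what discharges the $(6,4)$-cycle condition.
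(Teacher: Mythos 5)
Your proposal is correct and follows the paper's route exactly: the paper derives this proposition verbatim by specializing \ref{126notinu3} and \ref{12T3124} to $j=6$, with the hypothesis ``no $(6,4)_{(u_4,u_3)}$-path'' discharged implicitly by $c(vu_4)=6$ together with $4\notin C(v)$ and $6\notin C(u)$, which is precisely the one substantive check you isolate and verify. Nothing further is needed.
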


Assume that $5\not\in C(u_1)$ and $H$ contains no $(5, 4)_{(u_1, u_4)}$-path.
If $H$ contains no $(5, 2)_{(u_2, u_1)}$-path,
then $u u_1\to 5$ reduces the proof to the Case (2.5.1.).
If $1\not\in C(u_2)$, $H$ contains no $(1, i)_{(u_2, u_i)}$-path for each $i\in \{3, 4\}$,
and $(u u_1, u u_2)\to (5, 1)$ reduces the proof to the Case (2.5.1.).
Hence, and by symmetry, we proceed with the following propositions, or otherwise we are done:
\begin{proposition}
\label{12out5notinu1u2}
\begin{itemize}
\parskip=0pt
\item[{\rm (1)}]
	If $5\not\in C(u_1)$ and $H$ contains no $(5, 4)_{(u_1, u_4)}$-path, then
    \begin{itemize}
    \parskip=0pt
    \item[{\rm (1.1)}]
	    $H$ contains a $(5, 2)_{(u_1, u_2)}$-path.
    \item[{\rm (1.2)}]
	    $1\in C(u_2)$, or $H$ contains a $(1, i)_{(u_2, u_i)}$-path for some $i\in \{3, 4\}$.
    \end{itemize}
\item[{\rm (2)}]
	If $5\not\in C(u_2)$ and $H$ contains no $(5, 4)_{(u_2, u_4)}$-path, then
    \begin{itemize}
    \parskip=0pt
    \item[{\rm (2.1)}]
	    $H$ contains a $(5, 1)_{(u_1, u_2)}$-path.
    \item[{\rm (2.2)}]
	    $2\in C(u_1)$, or $H$ contains a $(2, i)_{(u_1, u_i)}$-path for some $i\in \{3, 4\}$.
    \end{itemize}
\end{itemize}
\end{proposition}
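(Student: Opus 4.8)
The plan is to establish each conclusion by contraposition, exhibiting a recoloring of $uu_1$ and $uu_2$ that is acyclic and transforms the present configuration---Case (2.5.2.), in which $A_{uv}=\{1,2\}$, $c(vu_3)=5$, $c(vu_4)=6$ and $C(v)=\{1,2,5,6,7\}$---into a relabelled instance of the already-settled Case (2.5.1.). Because Case (2.5.1.) has been resolved, reaching it means we are done, so the negated conclusion can only drive us into an already-handled situation, and the proposition holds in all remaining ones. Part (2) I would then obtain verbatim from part (1) by the symmetry interchanging $(u_1,1)$ with $(u_2,2)$, since $1=i_1$ and $2=i_2$ play symmetric roles in the configuration.

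For (1.1) I would assume, contrary to the claim, that $H$ has no $(5,2)_{(u_1,u_2)}$-path, and recolor $uu_1\to 5$. Since $uu_1$ is the only recoloured edge, every new bichromatic cycle must run through $u$ and pair colour $5$ on $uu_1$ with a colour at $u$, namely $2$, $3$ or $4$. The $(5,2)$- and $(5,4)$-cycles are precisely $(5,2)_{(u_1,u_2)}$- and $(5,4)_{(u_1,u_4)}$-paths, both ruled out by hypothesis; the decisive case is the $(5,3)$-cycle, which I would kill using that $vu_3$ is the unique colour-$5$ edge at $u_3$, so such a cycle must pass through $v$, whereas $3\notin C(v)$ forbids the continuation there. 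With validity established, the new coloring has $A_{uv}=\{2,5\}$ and the common colour $5$ sitting on $vu_3$, i.e.\ Case (2.5.1.), so we are done; hence the $(5,2)_{(u_1,u_2)}$-path must in fact exist.

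For (1.2) I would keep the hypotheses of (1) and suppose instead that $1\notin C(u_2)$ with no $(1,3)_{(u_2,u_3)}$- and no $(1,4)_{(u_2,u_4)}$-path, then recolor $(uu_1,uu_2)\to(5,1)$. Again all new cycles pass through $u$: the $(5,3)$-cycle dies by $3\notin C(v)$ exactly as above, $(5,4)$ is excluded by hypothesis, and $(1,3)$, $(1,4)$ are precisely the paths forbidden in the present supposition. The one remaining type, a $(5,1)$-cycle, would have to leave $u_1$ along a colour-$1$ edge; but $uu_1$ was the unique colour-$1$ edge at $u_1$ and has just been recoloured, so $u_1$ now carries no colour-$1$ edge and the cycle cannot even start. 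Thus the double recoloring is valid, yields $A_{uv}=\{1,5\}$ with $5$ on $vu_3$ (Case (2.5.1.)), and we are done; hence $1\in C(u_2)$ or one of the two $(1,i)$-paths exists.

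I expect the main obstacle to be not conceptual but the careful case-by-case elimination of new bichromatic cycles after each recoloring. Two structural observations are what make these checks short: $3\notin C(v)$, so any $(5,3)$-alternation is trapped at $v$, and the fact that each $uu_j$ is the unique colour-$c(uu_j)$ edge at $u_j$, so a $(5,1)$-alternation cannot begin at $u_1$ once $uu_1$ is recoloured. A secondary point I would state explicitly is that the recoloured coloring, after relabelling colours and swapping the roles of $u_3$ and $u_4$, is a bona fide instance of Case (2.5.1.), ensuring the appeal to ``we are done'' is legitimate and introduces no circularity.
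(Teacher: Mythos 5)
Your proposal is correct and follows essentially the same route as the paper: the paper likewise observes that if the $(5,2)_{(u_1,u_2)}$-path is absent then $uu_1\to 5$ is a valid recoloring landing in the already-settled Case (2.5.1.), that under the additional negation of (1.2) the pair $(uu_1,uu_2)\to(5,1)$ does the same, and that part (2) follows by the $1\leftrightarrow 2$ symmetry. Your extra verifications (killing the $(5,3)$-alternation at $v$ via $3\notin C(v)$, and the impossibility of a $(5,1)$-cycle once $u_1$ loses its colour-$1$ edge) are exactly the details the paper leaves implicit, and they check out.
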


\begin{proposition}
\label{12out6notinu1u2}
\begin{itemize}
\parskip=0pt
\item[{\rm (1)}]
	If $6\not\in C(u_1)$ and $H$ contains no $(6, 3)_{(u_1, u_3)}$-path, then
    \begin{itemize}
    \parskip=0pt
    \item[{\rm (1.1)}]
	    $H$ contains a $(6, 2)_{(u_1, u_2)}$-path.
    \item[{\rm (1.2)}]
	    $1\in C(u_2)$, or $H$ contains a $(1, i)_{(u_2, u_i)}$-path for some $i\in \{3, 4\}$.
    \end{itemize}
\item[{\rm (2)}]
	If $6\not\in C(u_2)$ and $H$ contains no $(6, 3)_{(u_2, u_3)}$-path, then
    \begin{itemize}
    \parskip=0pt
    \item[{\rm (2.1)}]
	    $H$ contains a $(6, 1)_{(u_1, u_2)}$-path.
    \item[{\rm (2.2)}]
	    $2\in C(u_1)$, or $H$ contains a $(2, i)_{(u_1, u_i)}$-path for some $i\in \{3, 4\}$.
    \end{itemize}
\end{itemize}
\end{proposition}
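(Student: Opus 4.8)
The plan is to mirror the colour-$5$ analysis that precedes Proposition~\ref{12out5notinu1u2}, now run with the colour $6 = c(vu_4)$ in place of $5 = c(vu_3)$. Throughout we are in Case~(2.5.2), where $A_{uv}=\{1,2\}$, $C(u)=[1,4]$, $c(vu_3)=5$, $c(vu_4)=6$, and $u_4=v_6$ is the common neighbour carrying the colour-$6$ edge $vu_4$; in particular $C(v)=\{1,2\}\cup[5,7]$, so $3,4\notin C(v)$. As usual this is a contradiction argument inside the inductive step: each recolouring we attempt either completes an acyclic edge $(\Delta+5)$-colouring of $G$ (or reduces to Case~(2.5.1), already handled), or else forces the existence of a particular bichromatic path, and the proposition merely records the surviving path conditions.

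For part~(1) I would first recolour $uu_1\to 6$. Since $6\notin C(u_1)$ and $6\notin C(u)=[1,4]$ this is proper, so the only danger is a new $(6,j)$-cycle through $uu_1$ with $j\in\{2,3,4\}$. The key point is that the $(6,4)$-case is \emph{automatically} excluded: a $(6,4)_{(u_1,u_4)}$-path would have to reach $u_4$ along its unique colour-$6$ edge $vu_4$, hence through $v$, and the edge entering $v$ would be colour~$4$; but $4\notin C(v)$. This is the exact analogue of the $(5,3)$-exclusion used for colour~$5$. The $(6,3)_{(u_1,u_3)}$-case is excluded by hypothesis. Hence if there were also no $(6,2)_{(u_1,u_2)}$-path, the single recolouring would be acyclic and reduce to Case~(2.5.1), a contradiction; this gives (1.1).

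Granting (1.1), for (1.2) I would instead try the simultaneous recolouring $(uu_1,uu_2)\to(6,1)$, which also destroys the $(6,2)$-cycle of (1.1) by removing colour~$2$ from $uu_2$. Here one checks that the $(6,1)$-cycle danger is automatic: as $6\notin C(u_1)$ and the only colour-$1$ edge at $u_1$ is $uu_1$ (leading to $u$, which carries no colour-$6$ edge), there is no $(6,1)$-path joining $u_1$ and $u_2$ off $u$. The remaining obstructions for $uu_1\to 6$ are the already-absent $(6,3)$- and $(6,4)$-paths, and for $uu_2\to 1$ they are the $(1,3)_{(u_2,u_3)}$- and $(1,4)_{(u_2,u_4)}$-paths together with $1\notin C(u_2)$. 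Thus if $1\notin C(u_2)$ and no $(1,i)_{(u_2,u_i)}$-path exists for $i\in\{3,4\}$, the double recolouring is acyclic and again reduces to Case~(2.5.1), a contradiction; this yields (1.2). Part~(2) then follows verbatim after interchanging the roles of $u_1,u_2$ and of the colours $1,2$.

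The routine-but-delicate part, and the one I would be most careful with, is verifying the two automatic-exclusion claims and the legitimacy of each reduction to Case~(2.5.1): after every recolouring one must confirm that $C(u)\cap C(v)$ is again a two-element set containing $6=c(vu_4)$, so the new configuration genuinely matches Case~(2.5.1) up to the relabelling $6\leftrightarrow 2$, and that altering only edges incident to $u$ cannot create any bichromatic cycle beyond those enumerated. These checks are local and rely solely on $3,4\notin C(v)$, $6\notin C(u)$, and Lemma~\ref{lemma06}.
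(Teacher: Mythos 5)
Your proposal is correct and is essentially the paper's own argument: the paper writes out exactly this recolouring scheme ($uu_1\to 5$, then $(uu_1,uu_2)\to(5,1)$) for the colour-$5$ twin Proposition~\ref{12out5notinu1u2} and obtains Proposition~\ref{12out6notinu1u2} "by symmetry," which is precisely the colour-$6$ instance you spell out. Your identification of the automatically excluded case — the $(6,4)_{(u_1,u_4)}$-path forced through $v$ and killed by $4\notin C(v)$ — is the right reason the hypothesis only needs to exclude the $(6,3)_{(u_1,u_3)}$-path.
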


By \ref{12out5notinu1u2}(1.1), (2.1), and \ref{12out6notinu1u2}(1.1), (2.1),
we have the following proposition:
\begin{proposition}
\label{12out56u1u2}
$5/2/4, 6/2/3\in C(u_1)$, $5/1/4, 6/1/3\in C(u_2)$, and $5, 6\in C(u_1)\cup C(u_2)$.
\end{proposition}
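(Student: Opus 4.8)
The plan is to derive each of the six membership assertions directly from the four cited sub‑conclusions, using only the elementary fact that a bichromatic path forces its two colors onto prescribed edges at its endpoints, together with the uniqueness of maximal bichromatic paths (Lemma~\ref{lemma06}). Throughout I keep in mind that each such path arises from a tentative recoloring of $u u_1$ (resp. $u u_2$) to $5$ or $6$, so that in a $(5,4)_{(u_1,u_4)}$-path the edge at $u_1$ is colored $4$ and the edge at $u_4$ is colored $5$ (it must alternate with $u u_4$, of color $4$, to close the putative cycle), in a $(5,2)_{(u_1,u_2)}$-path the edge at $u_1$ is colored $2$ and the edge at $u_2$ is colored $5$, and analogously for the remaining pairs.

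First I would establish $5/2/4\in C(u_1)$. If $5\in C(u_1)$ we are done, so assume $5\notin C(u_1)$. If $H$ contains a $(5,4)_{(u_1,u_4)}$-path, then its edge incident with $u_1$ is colored $4$, whence $4\in C(u_1)$. Otherwise the hypothesis of \ref{12out5notinu1u2}(1) is met, so by \ref{12out5notinu1u2}(1.1) $H$ contains a $(5,2)_{(u_1,u_2)}$-path, whose edge at $u_1$ is colored $2$, giving $2\in C(u_1)$. In either case $5/2/4\in C(u_1)$. The assertion $6/2/3\in C(u_1)$ is proved identically, with the $(5,4)_{(u_1,u_4)}$-path replaced by a $(6,3)_{(u_1,u_3)}$-path (which puts $3$ into $C(u_1)$) and \ref{12out6notinu1u2}(1.1) invoked to obtain a $(6,2)_{(u_1,u_2)}$-path (which puts $2$ into $C(u_1)$).

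The two assertions for $C(u_2)$ are symmetric. For $5/1/4\in C(u_2)$ I again split on whether $5\in C(u_2)$; if not, either a $(5,4)_{(u_2,u_4)}$-path puts $4$ into $C(u_2)$, or \ref{12out5notinu1u2}(2.1) yields a $(5,1)_{(u_1,u_2)}$-path whose edge at $u_2$ is colored $1$, giving $1\in C(u_2)$. The assertion $6/1/3\in C(u_2)$ follows in the same manner from \ref{12out6notinu1u2}(2.1).

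Finally I would prove $5,6\in C(u_1)\cup C(u_2)$. Suppose, for contradiction, that $5\notin C(u_1)\cup C(u_2)$. Since a $(5,2)_{(u_1,u_2)}$-path carries color $5$ on its edge at $u_2$ and a $(5,1)_{(u_1,u_2)}$-path carries color $5$ on its edge at $u_1$, the conclusions \ref{12out5notinu1u2}(1.1) and (2.1) would each already place $5$ in $C(u_1)\cup C(u_2)$; hence neither applies, which forces both a $(5,4)_{(u_1,u_4)}$-path and a $(5,4)_{(u_2,u_4)}$-path to exist. Both of these reach $u_4$ through its unique color‑$5$ edge, so the maximal $(5,4)$-path containing that edge contains both of them; by Lemma~\ref{lemma06} this maximal path has a single far endpoint, forcing $u_1=u_2$, a contradiction. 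Thus $5\in C(u_1)\cup C(u_2)$, and the same argument with the pairs $(6,3)$, $(6,2)$, $(6,1)$ at $u_3,u_4$ gives $6\in C(u_1)\cup C(u_2)$. The only delicate point is the bookkeeping of which of the two colors sits on the edge at each endpoint of each path; once the recoloring that generates each path is kept in mind this is routine, so I expect no real obstacle beyond this careful tracking.
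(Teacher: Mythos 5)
Your proof is correct and follows the same route as the paper, which obtains the proposition directly from \ref{12out5notinu1u2}(1.1),(2.1) and \ref{12out6notinu1u2}(1.1),(2.1); you simply make explicit the color bookkeeping at the path endpoints and the final step in which two $(5,4)$-paths both terminating at $u_4$ through its unique $5$-colored edge would force $u_1=u_2$. One small remark: your appeal to Lemma~\ref{lemma06} in that last step is informal, since its hypothesis $b\notin C(u)$ fails at $u_4$ (as $4\in C(u_4)$), but the fact you actually use --- that the $(4,5)$-colored subgraph decomposes into paths and cycles, so the alternating walk out of $u_4$ along its $5$-edge has a unique dead end --- is elementary and is invoked implicitly throughout the paper.
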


\begin{lemma}
\label{12outT2ge1}
For each $i\in [1, 2]$, $w_i\ge 3$, $t_i\ge 1$,
and for each $j\in T_i$, there exists a $a_i\in [3, 4]\cap C(u_i)$ such that $H$ contains a $(a_i, j)_{(u_i, u_{a_i})}$-path.
\end{lemma}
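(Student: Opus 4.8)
The plan is to argue by contradiction, mirroring the proofs of Lemmas~\ref{12T2ge1} and \ref{12T1ge1} from Case~(2.5.1). Since $A_{uv}=\{1,2\}$ and the two triangles $uvu_3$ (with $u_3=v_5$) and $uvu_4$ (with $u_4=v_6$) are left invariant by the color swap $1\leftrightarrow 2$ — which merely interchanges $u_1$ with $u_2$ and $v_1$ with $v_2$ while fixing $u,v,u_3,u_4$ and the colors $3,4,5,6,7$ — the configuration is symmetric in the indices $1$ and $2$. Hence it suffices to establish all three conclusions for $i=1$, and the case $i=2$ follows verbatim after applying this swap.

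First I would fix a lower bound on $w_1$. By \ref{12out56u1u2} both $5/2/4\in C(u_1)$ and $6/2/3\in C(u_1)$ hold, and since $c(uu_1)=1$ forces $1\in W_1$, we already get $w_1\ge 2$. Suppose for contradiction that $w_1\le 2$, so $w_1=2$. As $1$ lies in neither $\{5,2,4\}$ nor $\{6,2,3\}$, the single extra color of $W_1$ must lie in $\{5,2,4\}\cap\{6,2,3\}=\{2\}$; thus $W_1=\{1,2\}$, $5,6\notin C(u_1)$, and $C(u_1)=\{1,2\}\cup T_2\cup C_{12}$.

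With $5,6\notin C(u_1)$ in hand, the next step is to push this structural rigidity onto $u_2,u_3,u_4$ and run a token count. From \ref{12out56u1u2} I obtain $5,6\in C(u_2)$; from \ref{12out5notinu1u2}(1) and \ref{12out6notinu1u2}(1), after the harmless splits according to whether $H$ already contains a $(5,4)_{(u_1,u_4)}$- or a $(6,3)_{(u_1,u_3)}$-path, I extract the $(5,2)_{(u_1,u_2)}$- and $(6,2)_{(u_1,u_2)}$-paths together with the membership constraints they impose on $C(u_2)$. Combining these with \ref{12out6notinu3} (to place $6$ and the residual small colors among $u_2,u_3,u_4$), with Corollary~\ref{auv2}(5) (to relocate the colors $3,4$ and the tails $T_1,T_2$ into $C(u_3),C(u_4)$), and with the path data recorded in \ref{12outRineemptyset} and \ref{125notinu1}'s analogue, every branch terminates in one of two ways: either a listed recoloring of $uu_1,uu_2,uu_3,uu_4$ (or of $vu_3,vu_4$) either colors $uv$ or reduces $C(u)\cap C(v)$, so that Lemma~\ref{lemma17} or Corollary~\ref{auv2d(u)5}(2.1) finishes; or else the colors are forced to occur with enough multiplicity that, by Eq.~(\ref{eq17}), $\sum_{i\in[1,4]}d(u_i)=\sum_i w_i+\sum_{j\in T_{uv}}\mathrm{mult}_{S_u}(j)\ge 2\Delta+14$, contradicting the $(A_8)$ bound $\sum_{i\in[1,4]}d(u_i)\le 2\Delta+13$. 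This contradiction yields $w_1\ge 3$.

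Finally, $w_1\ge 3$ gives $t_1=\Delta-d(u_1)+w_1-2\ge w_1-2\ge 1$ (the inequality $t_i\ge w_i-2$ was noted at the start of Case~2), so $T_1\ne\emptyset$, and the same holds for $i=2$ by the symmetry above. For the path assertion, fix $j\in T_1$; since we are outside the trivial \emph{done} branch, recoloring $uu_1\to j$ cannot decrease $a_{uv}$, so Corollary~\ref{auv2}(1.2) supplies a color $a_1\in\{3,4\}\cap C(u_1)$ with a $(a_1,j)_{(u_1,u_{a_1})}$-path, exactly as claimed. The main obstacle will be the bookkeeping in the third paragraph: as in Lemmas~\ref{12T2ge1} and \ref{12T1ge1}, one must run exhaustively through all placements of $\{3,4,5,6,7\}$ and of $T_1,T_2,C_{12}$ across $u_2,u_3,u_4$, each time either exhibiting a completing recoloring or sharpening the multiplicity estimate to reach $2\Delta+14$, while keeping these branches consistent with the many path conditions accumulated in \ref{12outRineemptyset}--\ref{12out56u1u2}.
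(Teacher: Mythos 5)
Your overall strategy coincides with the paper's: reduce to $i=1$ by the $1\leftrightarrow 2$ symmetry, derive $W_1=\{1,2\}$ from \ref{12out56u1u2} when $w_1\le 2$, aim for a degree-sum contradiction against the $(A_8)$ bound, conclude $t_i\ge w_i-2\ge 1$, and obtain the path statement from Corollary~\ref{auv2}(1.2). The opening and closing steps are sound. But the decisive middle step — actually reaching $\sum_{i\in[1,4]}d(u_i)\ge 2\Delta+14$ — is only announced, not carried out, and you misjudge its shape: you anticipate an exhaustive branch-by-branch analysis over all placements of $\{3,4,5,6,7\}$ and of $T_1,T_2,C_{12}$, whereas the argument closes in one short chain whose key links are absent from your write-up.

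Concretely, what is missing is this. Since $W_1=\{1,2\}$ gives $3,4\notin C(u_1)$, the escape clauses of \ref{125u4u1oru4u2} and \ref{12out6notinu3} (which would place $3$ or $4$ into $C(u_1)$) are unavailable, which \emph{forces} $5\in C(u_4)$ and $6\in C(u_3)$. That yields $w_3,w_4\ge 3$, hence $T_3\ne\emptyset$ and $T_4\ne\emptyset$, which is exactly what licenses the application of \ref{12T4123}(3), \ref{12T3124}(3) and Corollary~\ref{auv2}(4): these give $T_2\subseteq C(u_3)\cap C(u_4)$ together with colors $a_3\in\{1,2,4\}\cap C(u_3)$, $b_5\in\{1,2,6,7\}\cap C(u_3)$ and $a_4\in\{1,3\}\cup\{2\}$, $b_6\in\{1,2,5,7\}$ in $C(u_4)$, so that $w_3\ge|\{3,5,6,a_3,b_5\}|=5$ and $w_4\ge|\{4,5,6,a_4,b_6\}|=5$; meanwhile $w_2\ge|\{2,5,6,a_2\}|=4$ (the $a_2\in[3,4]\cap C(u_2)$ again coming from Corollary~\ref{auv2}(1.2) once $t_2\ge 1$), and since $T_2\subseteq C(u_1)\cap C(u_3)\cap C(u_4)$ one gets $t_0\ge t_2\ge 2$. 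Only then does Eq.~(\ref{eq17}) give $\sum_i d(u_i)\ge 2+4+5+5+2t_{uv}+2=2\Delta+14$. Without the implications $3,4\notin C(u_1)\Rightarrow 5\in C(u_4),\,6\in C(u_3)\Rightarrow T_3,T_4\ne\emptyset$ and the resulting $b_5,b_6$ and $t_0\ge 2$, the count stalls well short of the contradiction, so as written the proposal does not yet prove $w_1\ge 3$.
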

\begin{proof}
Assume that $w_1\le 2$.
It follows from \ref{12out56u1u2} that $W_1 = [1, 2]$, $5, 6\in C(u_2)$,
and then $C(u_1) = [1, 2]\cup T_{uv}$, there exists a $a_2\in [3, 4]\cap C(u_2)$,
and from \ref{12out5notinu1u2}(1.1) that $H$ contains a $(2, 5)_{(u_1, u_2)}$-path,
from \ref{12out6notinu1u2}(1.1) that $H$ contains a $(2, 6)_{(u_1, u_2)}$-path.

One sees clearly that $3, 4\not\in C(u_1)$.
It follows from \ref{125u4u1oru4u2}(5) that $5\in C(u_4)$,
and \ref{12out6notinu3}(5) that $6\in C(u_3)$.
Hence, $T_3\ne \emptyset$ and $T_4\ne \emptyset$.
It follows from \ref{12T4123}(3) and
Corollary~\ref{auv2}(4) that $T_2\subseteq C(u_4)$, for each $j\in T_4$,
there exists a $a_4\in [1, 3]\cap C(u_4)$, $b_6\in \{1, 2, 5, 7\}\setminus \{a_4\}$ such that $H$ contains a $(a_4, j)_{(u_4, u)}$-path, a $(b_6, j)_{(u_4, v)}$-path,
and from \ref{12T3124}(3) and
Corollary~\ref{auv2}(4) that $T_2\subseteq C(u_3)$, for each $j\in T_3$,
there exists a $a_3\in \{1, 2, 4\}\cap C(u_3)$, $b_5\in \{1, 2, 6, 7\}\setminus \{a_3\}$ such that $H$ contains a $(a_3, j)_{(u_3, u)}$-path, a $(b_5, j)_{(u_3, v)}$-path.
Thus, $\sum_{x\in N(u)\setminus \{v\}}d(x)\ge |[1, 2]| + |\{2, 5, 6, a_2\}| + |\{3, 5, 6, a_3, b_5\}| + |\{4, 5, 6, a_4, b_6\}| + 2t_{uv} + t_0\ge 2\Delta + 12 + 2 = 2\Delta + 14$, a contradiction.
\end{proof}

Assume that $1\not\in S_u$.
It follows from Corollary~\ref{auv2d(u)5}(7.2) that $3, 4\in C(u_1)$, $T_1\subseteq C(u_3)\cap C(u_4)$,
from \ref{12outRineemptyset}(3) that $2\in C(u_1)$ or $H$ contains a $(2, i)_{(u_2, u_i)}$-path,
from \ref{12out5notinu1u2}(1.1) that $5\in C(u_1)$ or $H$ contains a $(4, 5)_{(u_1, u_4)}$-path,
from \ref{12out5notinu1u2}(2.1) that $5\in C(u_2)$ or $H$ contains a $(4, 5)_{(u_2, u_4)}$-path,
and from \ref{12out6notinu1u2}(1.1) that $6\in C(u_1)$ or $H$ contains a $(3, 6)_{(u_1, u_3)}$-path,
from \ref{12out6notinu1u2}(2.1) that $6\in C(u_2)$ or $H$ contains a $(3, 6)_{(u_2, u_3)}$-path.
It follows that mult$_{S_u\setminus C(u_3)}(5)\ge 2$ and mult$_{S_u\setminus C(u_4)}(6)\ge 2$.
Since the same argument applies if $7\not\in S_u$ and $uu_2\to 7$, $7\in S_u$.
Then $\sum_{x\in N(u)\setminus \{v\}}d(x)\ge |\{1, 3, 4\}| + |\{2, a_2\}| + |\{3, 5\}| + |\{4, 6\}| + |\{2, 7\}| + 2\times |\{5, 6\}| + 2t_{u v} + t_0 = 2\Delta + 11 + t_0\ge 2\Delta + 11 + t_1$.
One sees from $1, 3, 4, 2, a_2, 5, 6\in W_1\biguplus W_2$ that $t_1 + t_2\ge 3$.
It follows that $T_2\setminus T_0\ne \emptyset$.
Assume w.l.o.g. $T_2\setminus C(u_4)\ne \emptyset$.
Then $3\in C(u_2)$ and $T_2\subseteq C(u_3)$.
It follows from Corollary~\ref{auv2}(5.1) that $2\in C(u_3)$,
or $H$ contains a $(2, i)_{(u_3, u_i)}$-path for each $i\in \{1, 4\}$,
and from Corollary~\ref{auv2}(5.2) that mult$_{S_u}(4)\ge 2$.
One sees that if $2\not\in C(u_1)\cup C(u_4)$,
then $H$ contains a $(2, 4)_{(u_1, u_4)}$-path and a $(2, 4)_{(u_3, u_4)}$-path, a contradiction.
Hence, $2\in C(u_1)\cup C(u_3)$, and thus,
$d(u_1) + d(u_2) + d(u_3)\ge |\{1, 2, 3, 5\}| + 2\times |\{3, 4, 6\}| + |\{2, 5\}| + 2t_{u v} = 2\Delta + 8$.
Since the same argument applies if $7\not\in S_u$ and $u u_1\to 7$,
we proceed with the following proposition, or otherwise we are done:
\begin{proposition}
\label{12out127Su}
$1, 2\in S_u$, and $7\in S_u$
\end{proposition}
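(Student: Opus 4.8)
The plan is to prove all three memberships by contradiction, in the same ``or otherwise we are done'' style used throughout this subsection, leaning on the fact that for the configuration $(A_8)$ one always has $\sum_{i\in[1,4]}d(u_i)\le 2\Delta+13$; consequently any forced lower bound that pushes the degree sum above this threshold settles the claim. In practice I expect even a three-vertex bound $d(u_1)+d(u_2)+d(u_3)\ge 2\Delta+8$ to suffice, since it already exceeds $3\Delta$ when $\Delta\le 7$, and when $\Delta\ge 8$ it leaves $d(u_4)\le 11-\Delta$, which is impossible.

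First I would settle $1\in S_u$. Assuming $1\notin S_u$, Corollary~\ref{auv2d(u)5}(7.2) forces $3,4\in C(u_1)$ and $T_1\subseteq C(u_3)\cap C(u_4)$; then \ref{12outRineemptyset}(3) gives $2\in C(u_1)$ or a suitable $(2,i)_{(u_2,u_i)}$-path, while \ref{12out5notinu1u2}(1.1),(2.1) make color $5$ appear at both $u_1$ and $u_2$ (directly, or through a $(4,5)_{(u_i,u_4)}$-path), so that $\mathrm{mult}_{S_u\setminus C(u_3)}(5)\ge 2$, and symmetrically \ref{12out6notinu1u2}(1.1),(2.1) give $\mathrm{mult}_{S_u\setminus C(u_4)}(6)\ge 2$. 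Summing the contributions of $W_1$, $W_2$, $C(u_3)$, $C(u_4)$, the doubled $5,6$, the $2t_{uv}$ entries of $T_{uv}$, and $t_0$ yields $\sum_{x\in N(u)\setminus\{v\}}d(x)\ge 2\Delta+11+t_0$, and since $1,3,4,2,a_2,5,6$ all lie in $W_1\biguplus W_2$ one gets $t_1+t_2\ge 3$, whence $T_2\setminus T_0\ne\emptyset$.

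Next, taking w.l.o.g.\ $T_2\setminus C(u_4)\ne\emptyset$, Corollary~\ref{auv2}(5.1),(5.2) force $3\in C(u_2)$, $T_2\subseteq C(u_3)$, $\mathrm{mult}_{S_u}(4)\ge 2$, and (since $2\notin C(u_1)\cup C(u_4)$ would produce both a $(2,4)_{(u_1,u_4)}$- and a $(2,4)_{(u_3,u_4)}$-path, contradicting Lemma~\ref{lemma06}) $2\in C(u_1)\cup C(u_3)$; gathering these forced colors gives $d(u_1)+d(u_2)+d(u_3)\ge 2\Delta+8$, the contradiction. The symmetry between $u_1$ and $u_2$ (both in $A_{uv}=\{1,2\}$) then delivers $2\in S_u$, and $7\in S_u$ follows because, were $7\notin S_u$, recoloring $uu_1\to 7$ (or $uu_2\to 7$) reproduces an acyclic edge $(\Delta+5)$-coloring with $a_{uv}=2$ on which the identical argument applies, so we are again done.

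The step I expect to be the main obstacle is the multiset bookkeeping for $S_u$: I must verify that the entries counted with multiplicity two for $5$, $6$, and $4$, together with the single entries for $1,2,3,7$, are genuinely distinct slots of $W_1\biguplus W_2\biguplus (C(u_3)\cup C(u_4))$ and are not silently identified with the $t_{uv}$ colors of $T_{uv}$ already tallied, so that the bound $d(u_1)+d(u_2)+d(u_3)\ge 2\Delta+8$ is honest; correctly handling the $T_2\setminus C(u_4)=\emptyset$ alternative and the mirror-image $T_1$ side (which the w.l.o.g.\ reduction hides) is exactly where a careless count would collapse.
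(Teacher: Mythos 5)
Your proposal reproduces the paper's own argument essentially step for step: the same contradiction via Corollary~\ref{auv2d(u)5}(7.2), the same appeals to \ref{12outRineemptyset}(3), \ref{12out5notinu1u2} and \ref{12out6notinu1u2} to double the multiplicities of $5$ and $6$, the same counts $2\Delta+11+t_0$ and $t_1+t_2\ge 3$ leading to $T_2\setminus T_0\ne\emptyset$, the same two-path contradiction forcing $2\in C(u_1)\cup C(u_3)$, the final bound $d(u_1)+d(u_2)+d(u_3)\ge 2\Delta+8$, and the recolor-$uu_i\to 7$ trick for $7\in S_u$. It is correct and takes the same approach as the paper.
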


\begin{lemma}
\label{12outT2ge2}
For each $i\in [1, 2]$, $4\le w_i\le 5$, $t_i\ge 2$ and $\min\{w_1, w_2\} = 4$.
\end{lemma}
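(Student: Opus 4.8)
The plan is to prove all three assertions by contradiction, reusing the case-analysis-and-counting template that drives Case 2.5.2. I keep in view the relations $t_i=\Delta-d(u_i)+w_i-2\ge w_i-2$ and $w_1+w_2\le 4+t_1+t_2$, together with the degree count $\sum_{i\in[1,4]}d(u_i)=\sum_{i\in[1,4]}w_i+\sum_{j\in T_{uv}}\mathrm{mult}_{S_u}(j)\ge\sum_{i\in[1,4]}w_i+2t_{uv}+t_0$ of Eq.~\eqref{eq17}; since $G$ contains one of \emph{($A_{8.1}$)--($A_{8.4}$)} this sum is at most $2\Delta+13$ (the binding case being $\Delta=7$), so forcing $\sum_{i\in[1,4]}d(u_i)\ge 2\Delta+14$ is a contradiction. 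The standing facts I rely on are Lemma~\ref{12outT2ge1} ($w_i\ge 3$, $t_i\ge 1$, and a forced $a_i\in[3,4]\cap C(u_i)$ carrying a $(a_i,j)_{(u_i,u_{a_i})}$-path for each $j\in T_i$), \ref{12out56u1u2} (so $5,6\in C(u_1)\cup C(u_2)$ with the listed conditional memberships), and \ref{12out127Su} ($1,2,7\in S_u$).

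First I would establish $w_i\ge 4$; this at once gives $t_i\ge 2$, because $t_i\ge w_i-2$. Assuming to the contrary $w_1=3$, I write $W_1=\{1,x,y\}$ with two extra colors $x,y\in\{3,4,5,6,7\}$ and split on them. In each branch the propositions \ref{12out5notinu1u2}, \ref{12out6notinu1u2} and \ref{12out6notinu3}, reinforced by Corollary~\ref{auv2}(5.1)--(5.2) and Corollary~\ref{auv2d(u)5}(7.1)--(7.2), fix which of $3,4,5,6,7$ must lie in $C(u_2),C(u_3),C(u_4)$ and which bichromatic paths they create. That is enough to either recolor the edges at $u$ into an acyclic edge $(\Delta+5)$-coloring (an $\overset{\divideontimes}\to$ assignment), or reduce $a_{uv}$ to at most $1$ by a Kempe-type swap, or push $\sum_{i\in[1,4]}w_i+2t_{uv}+t_0$ up to $2\Delta+14$. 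This is the exact pattern of the earlier Lemma~\ref{12T2ge1}, Lemma~\ref{12T1ge1} and Lemma~\ref{12outT2ge1}, only with one additional free color in $C(u_1)$.

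With $w_1,w_2\ge 4$ in hand I would then read off the upper part $w_i\le 5$ and $\min\{w_1,w_2\}=4$ from the same count. Here $C_{uv}=\{1,\dots,6\}$, so $w_i\le 6$, and $w_i=6$ would mean $\{1,\dots,6\}\subseteq C(u_i)$, whence $t_i\ge 4$; combining the colors then forced into $C(u_3),C(u_4)$ by \ref{12out6notinu3}, \ref{12T4123} and \ref{12T3124} with the large multiplicities entering $t_0$ again drives $\sum_{i\in[1,4]}d(u_i)$ past $2\Delta+13$. The identical estimate applied to $w_1=w_2=5$ forces $\sum_{i\in[1,4]}d(u_i)\ge 2\Delta+14$ unless a recoloring presents itself, which yields $\min\{w_1,w_2\}=4$.

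The main obstacle will be the $w_1=3$ branches in which the count is tight, namely $d(u_1)=d(u_2)=\Delta=7$ with $d(u_3),d(u_4)$ near $7$: there $\sum_{i\in[1,4]}d(u_i)$ sits exactly at $2\Delta+13$, so counting alone cannot close the branch and I must instead exhibit a concrete $\overset{\divideontimes}\to$ recoloring or invoke Lemma~\ref{lemma06} to forbid the offending bichromatic path. Tracking which of $5,6,7$ already occur at least twice in $S_u$ (so that recoloring $uu_1\to 5/6/7$ stays acyclic and the membership $7\in S_u$ can be spent) is the bookkeeping that makes these tight branches go through.
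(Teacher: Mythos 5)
Your template for the lower bound $w_i\ge 4$ (assume $w_1=3$, enumerate $W_1$, close each branch by a recoloring or by pushing $\sum_{i\in[1,4]}d(u_i)$ past $2\Delta+13$) is the paper's, but your enumeration has a hole: you take the two extra colors of $W_1=\{1,x,y\}$ from $\{3,4,5,6,7\}$, silently excluding $2\in C(u_1)$. Nothing forbids $2\in W_1$, and the paper must treat $W_1=\{1,4,2\}$ as its own Case~3, whose closing argument is genuinely different from the branches without $2$ (it leans on Corollary~\ref{auv2d(u)5}(5.1), forces $5\in C(u_2)\cap C(u_4)$ via \ref{12out5notinu1u2} and \ref{125u4u1oru4u2}(1), and only then counts). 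As written your split does not cover this branch; you would also want the reduction via \ref{12out56u1u2} and the $3\leftrightarrow 4$, $5\leftrightarrow 6$, $u_3\leftrightarrow u_4$ symmetry, which is what lets the paper get away with only the three cases $W_1=\{1,4,3\},\{1,4,6\},\{1,4,2\}$.

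The second half of your plan rests on a false premise and misses the short argument. Here $C_{uv}=C(u)\cup C(v)=[1,4]\cup(\{1,2\}\cup[5,7])=[1,7]$, not $\{1,\dots,6\}$, so the a priori bound is $w_i\le 7$ and your elimination of $w_i=6$ via ``$\{1,\dots,6\}\subseteq C(u_i)$'' never gets started. The paper obtains $w_i\le 5$ and $\min\{w_1,w_2\}=4$ in one line: since $C(u_1)\supseteq W_1\cup T_2\cup C_{12}$ and $C(u_2)\supseteq W_2\cup T_1\cup C_{12}$ with $T_1\cup T_2\cup C_{12}=T_{uv}$, one has $w_1+w_2+t_{uv}\le d(u_1)+d(u_2)\le \Delta+7$, the last inequality because each of ($A_{8.1}$)--($A_{8.4}$) forces $\min\{d(u_1),d(u_2)\}\le 7$; hence $w_1+w_2\le 9$, and together with $w_1,w_2\ge 4$ this yields both remaining claims with no appeal to $t_0$, multiplicities, or further recoloring. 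Replace your second paragraph by this counting step.
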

\begin{proof}
Assume that $w_1 = 3$.
By \ref{12out56u1u2}, assume w.l.o.g. $W_1 = \{1, 4, 6/2/3\}$.
Below we distinguish three cases where $W_1 = \{1, 4, i\}$, $i = 3, 6, 2$, respectively.

{\bf Case 1.} $W_1 = \{1, 4, 3\}$.
It follows from \ref{12out5notinu1u2}(1.1) that $H$ contains a $(4, 5)_{(u_1, u_4)}$-path, $5\in C(u_4)$,
from \ref{12out6notinu1u2}(1.1) that $H$ contains a $(6, 3)_{(u_1, u_3)}$-path, $6\in C(u_3)$,
and from \ref{12out5notinu1u2}(2.1) that $5\in C(u_2)$,
from \ref{12out6notinu1u2}(2.1) that $6\in C(u_2)$.
One sees that $\sum_{x\in N(u)\setminus \{v\}}d(x)\ge |\{1, 3, 4\}| + |\{2, 5, 6, a_2\}| + |\{3, 5, 6\}| + |\{4, 6, 5\}| + 2t_{u v} + t_0 = 2\Delta + 9 + t_0$.

{Case 1.1.} $1\not\in C(u_3)$ and $H$ contains no $(1, i)_{(u_3, u_i)}$-path for each $i\in \{2, 4\}$.
One sees from Corollary~\ref{auv2d(u)5}(7.2) that $T_1\subseteq C(u_4)$.
If $H$ contains no $(6, j)_{(u_1, u_4)}$-path for some $j\in T_2$,
then $(u u_1, u u_3, u v)\overset{\divideontimes}\to (6, 1, j)$.
Otherwise, $H$ contains a $(6, T_2)_{(u_1, u_4)}$-path and $T_2\subseteq C(u_4)$.
If there exists a $j\in (T_2\cup T_1)\setminus C(u_3)$,
then $(u u_1, u u_3)\to (6, j)$ reduces the proof to Case (2.5.1.).
Otherwise, $T_1\cup T_2\subseteq C(u_3)$.
Thus, $\sum_{x\in N(u)\setminus \{v\}}d(x)\ge 2\Delta + 9 + t_0 + |\{1, 2, 7\}|\ge 2\Delta + 12 + t_1 + t_2\ge 2\Delta + 12 + 1 + 2 = 2\Delta + 15$,
and none of ($A_{8.1}$)--($A_{8.4}$) holds.

{Case 1.2.} $1\in C(u_3)$ or $H$ contains a $(1, 2/4)_{(u_3, u)}$-path,
and $1\in C(u_4)$ or $H$ contains a $(1, 2/3)_{(u_4, u)}$-path.
It follows that there exists a $a_3\in \{1, 2, 4\}\cap C(u_3)$ and a $a_4\in \{1, 2, 3\}\cap C(u_4)$.
One sees clearly that $1\in C(u_3)\cup C(u_4)$, say $1\in C(u_3)$,
and if $1, 3\not\in C(u_4)$, i.e., $a_4 = 2$, then $1\in C(u_2)$.
Then $\sum_{x\in N(u)\setminus \{v\}}d(x)\ge 2\Delta + 9 + t_0 + |\{1\}| + |\{a_4\}| + |\{7, 2/1/3\}| = 2\Delta + 13 + t_0$.
It follows that $t_0 = 0$, $C(u_2)\cap [3, 4] = \{a_2\}$,
and $4\not\in C(u_3)$.
Then $T_2\subseteq C(u_{a_2})$,
and it follows from Corollary~\ref{auv2}(5.2) that mult$_{S_u}(i)\ge 2$, where $\{a_2, i\} = [3, 4]$.
Hence, $a_2 = 4$, $a_4 = 3$, $1\not\in C(u_2)\cup C(u_4)$,
and $T_2\subseteq C(u_4)$, $T_2\cap C(u_3) = \emptyset$.
Thus, $(u u_1, u u_3, u u_4, u v)\overset{\divideontimes}\to (5, 4, 1, T_2)$.

{\bf Case 2.} $W_1 = \{1, 4, 6\}$.
It follows from \ref{12out5notinu1u2}(1.1) that $H$ contains a $(4, 5)_{(u_1, u_4)}$-path, $5\in C(u_4)$,
from \ref{12out5notinu1u2}(2.1) that $5\in C(u_2)$,
and from Corollary~\ref{auv2}(5.1) that $1\in C(u_4)$, or $H$ contains a $(1, 2/3)_{(u_4, u)}$-path.
When $6\not\in C(u_2)\cup C(u_3)$,
one sees from \ref{12out6notinu1u2}(2.1) that $1\in C(u_2)$ and $H$ contains a $(1, 6)_{(1, 2)}$-path.
Then it follows from \ref{12out6notinu1u2}(2.2) that $2\in C(u_4)$,
from \ref{12out6notinu3}(2) that $T_2\subseteq C(u_4)$,
and from \ref{12out6notinu3}(4) that $3\in C(u_4)$.
One sees from $1, 2, 5, a_2\in C(u_2)$ that $t_2\ge 2$.
Thus, $d(u_4)\ge |[2, 6]| + t_1 + t_2\ge 5 + 1 + 2 = 8$.

In the case, $6\in C(u_2)\cup C(u_3)$.
One sees that $\sum_{x\in N(u)\setminus \{v\}}d(x)\ge |\{1, 4, 6\}| + |\{2, 5, a_2\}| + |\{3, 5\}| + |\{4, 6, 5\}| + |\{1, 2, 6, 7\}| + 2t_{u v} + t_0 = 2\Delta + 11 + t_0$.
We distinguish whether or not $T_2\subseteq C(u_4)$:

{Case 1.1.} $T_2\setminus C(u_4)\ne \emptyset$.
Then $3\in C(u_2)$, $T_2\subseteq C(u_3)$,
and it follows from Corollary~\ref{auv2}(5.2) that $4\in C(u_2)\cup C(u_3)$.
One sees that $T_1\subseteq C(u_3)$ or $3\in C(u_4)$.
Then $\sum_{x\in N(u)\setminus \{v\}}d(x)\ge 2\Delta + 11 + |\{4\}| + |\{3/T_1\}|\ge 2\Delta + 13 + t_0\ge 2\Delta + 13$.
It follows that mult$_{S_u}(i) = 1$, $i = 1, 2$, mult$_{S_u}(4) = 2$.
One sees from Corollary~\ref{auv2}(5.1) that $2\not\in C(u_4)$ or $H$ contains a $(2, 4)_{(u_3, u_4)}$-path.
It follows from \ref{12outRineemptyset}(3) that $1\in C(u_2)$ or $H$ contains a $(1, 3/4)_{(u_2, u)}$-path.
Hence, $1\in C(u_2)\cup C(u_4)$.

If $1\in C(u_2)$, one sees that $2\in C(u_4)$, $4\in C(u_3)$,
then $(u u_1, u u_2, u u_3, u u_4)\to (T_1, 4, 2, 1)$ reduces the proof to Case 2.3.
Otherwise, $1\in C(u_4)$, one sees that $4\in C(u_2)$, $2\in C(u_3)$,
then $(u u_1, u u_2, u u_3, u u_4)\to (T_1, 1, 2, 4)$ reduces the proof to Case 2.4.3.

{Case 1.2.} $T_2\subseteq C(u_4)$.
When $3\not\in C(u_4)$, it follows from Corollary~\ref{auv2}(5.2) that $T_1\cup T_2\subseteq C(u_3)$,
then $\sum_{x\in N(u)\setminus \{v\}}d(x)\ge 2\Delta + 11 + t_0\ge 2\Delta + 11 + t_1 + t_2\ge 2\Delta + 11 + 1 + 1 = 2\Delta + 13$.
It follows that $t_1 = t_2 = 1$, mult$_{S_u}(i) = 1$, $i = 1, 2$, and $4\not\in C(u_3)$, $3\not\in C(u_4)$.
One sees that $W_2 = \{2, 5, a_2\}$.
It follows from \ref{12out56u1u2} that $a_2 = 3$.
Hence, by Corollary~\ref{auv2}(5.1), $1\in C(u_4)\setminus C(u_3)$ and $2\in C(u_3)\setminus C(u_4)$.
Thus, by \ref{12outRineemptyset}, we are done.
In the other case, $3\in C(u_4)$.
Then $\sum_{x\in N(u)\setminus \{v\}}d(x)\ge 2\Delta + 11 + t_0 + |\{3\}| = 2\Delta + 12 + t_0$.
It follows that $(T_1\cup T_2)\setminus C(u_3)\ne \emptyset$,
and from Corollary~\ref{auv2}(5.2) and (3) that $3\in B_1$.
One sees that $4\in C(u_2)$ or $T_2\subseteq C(u_2)$.
Then $\sum_{x\in N(u)\setminus \{v\}}d(x)\ge 2\Delta + 12 + |\{4/T_2\}| = 2\Delta + 13$.
It follows that mult$_{S_u}(2) = 1$, $4\not\in C(u_3)$,
and if $4\in C(u_2)$, then $H$ contains a $(4, T_2)_{(u_2, u_4)}$-path.
If $2\not\in C(u_4)$, then $(u u_2, u u_3, u u_4)\to (T_2, 4, 2)$ reduces the proof to Case 2.4.3.
Otherwise, $2\in C(u_4)$ and $C(u_4) = [2, 6]\cup T_1\cup T_2$.
One sees clearly that $W_2 = \{2, 3, 5\}$ and $C(u_2) = \{3, 5, 6, 1, 7\}\cup T_2$.
Thus, $(u u_2, u u_3, u v)\overset{\divideontimes}\to (T_2, 2, T_1)$
\footnote{Or, by Corollary~\ref{auv2d(u)5}(7.2), we are done. }.

{\bf Case 3.} $W_1 = \{1, 4, 2\}$.
Then $T_1\subseteq C(u_1)$.
It follows from Corollary~\ref{auv2d(u)5}(5.1) that $1\in C(u_4)$,
or $H$ contains a $(1, 2/3)_{(u_4, u)}$-path,
and from \ref{12out5notinu1u2}(1.1) that $H$ contains a $(4, 5)_{(u_1, u_4)}$-path,
or a $(2, 5)_{(u_1, u_2)}$-path.
Together with \ref{12out5notinu1u2}(2.1) and \ref{125u4u1oru4u2}(1), we have $5\in C(u_2)\cap C(u_4)$.
It follows from \ref{12out6notinu1u2}(1) that $H$ contains a $(2, 6)_{(u_1, u_2)}$-path,
and from \ref{12out6notinu1u2}(1.2) that $1\in C(u_2)$, or $H$ contains a $(1, 3/4)_{(u_2, u)}$-path.
Hence, $1\in C(u_2)\cup C(u_4)$.
One sees that $t_2\ge 2$ and $\sum_{x\in N(u)\setminus \{v\}}d(x)\ge |\{1, 2, 4\}| + |\{2, 5, 6, a_2\}| + |\{3, 5\}| + |\{4, 6, 5\}| + |\{1, 7\}| + 2t_{u v} + t_0 = 2\Delta + 10 + t_0$.
Further, if $3\in C(u_4)$, then $T_2\setminus C(u_4)\ne \emptyset$,
or else $d(u_2) + d(u_4)\ge t_{uv} + t_1 + w_1 + w_4\ge \Delta + 8$.
One sees from \ref{12out6notinu3}(5) that $6\in C(u_3)$ or $3\in C(u_4)$.

If $T_2\subseteq C(u_4)$, one sees that $3\not\in C(u_4)$, then $6\in C(u_3)$,
and it follows from Corollary~\ref{auv2}(5.1) that $T_1\cup T_2\subseteq C(u_3)$,
thus, $\sum_{x\in N(u)\setminus \{v\}}d(x)\ge 2\Delta + 10 + t_0 + |\{6\}|\ge 2\Delta + 11 + t_1 + t_2\ge 2\Delta + 11 + 1 + 2 = 2\Delta + 14$.
Otherwise, $T_2\setminus C(u_4)\ne \emptyset$.
Then $3\in C(u_3)$, and it follows from Corollary~\ref{auv2}(5.2) that $4\in C(u_2)\cup C(u_3)$,
from Corollary~\ref{auv2}(5.1) that $2\in C(u_3)\cup C(u_4)$ or $1\in C(u_3)$,
and $3\in C(u_4)$ or $T_1\subseteq C(u_3)$.
Hence, $\sum_{x\in N(u)\setminus \{v\}}d(x)\ge 2\Delta + 10 + t_0 + |\{4, 1/2, 3/T_1\}/|\ge 2\Delta + 13$.
It follows that $6\not\in C(u_3)$, $3\in C(u_4)$, and $t_0 = 0$, i.e., $T_1\cap C(u_3) = T_2\cap C(u_4) = \emptyset$,
mult$_{S_u}(1) + $mult$_{S_u}(2) = 3$, and mult$_{S_u\setminus C(u_4)}(4) = 2$.
If $1\in C(u_2)$, one sees that $2\in C(u_4)$, $4\in C(u_3)$,
then $(u u_1, u u_2, u u_3, u u_4)\to (T_1, 4, 2, 1)$ reduces the proof to Case 2.3.
Otherwise, $1\in C(u_4)$ and $H$ contains a $(1, 4)_{(u_2, u_3)}$-path.
One sees that $4\in C(u_2)$, $2/1\in C(u_3)$.
Then $(u u_1, u u_2, u u_3, u u_4, u v)\overset{\divideontimes}\to (6, 1, 4, 2, T_2)$.

Hence, for each $i\in [1, 2]$, $w_i\ge 4$.
Together with $w_1 + w_2 + t_{uv}\le d(u_1) + c(u_2)\le \Delta + 7$,
$w_1 + w_2\le 9$.
It follows that $4\le w_1, w_2\le 5$ and $\min\{w_1, w_2\} = 4$.
\end{proof}

Hereafter, for each $i\in [1, 2]$, $w_i\ge 4$ and $t_i\ge 2$.

\begin{lemma}
\label{12outT1T2notinCu3Cu4}
For each $i\in [3, 4]$, $(T_1\cup T_2)\setminus C(u_i)\ne \emptyset$.
\end{lemma}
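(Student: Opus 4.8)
The plan is to argue by contradiction: suppose $(T_1\cup T_2)\subseteq C(u_i)$ for some $i\in\{3,4\}$, and show that this forces either an explicit extension of $c(\cdot)$ to $G$ or the degree bound $\sum_{x\in N(u)\setminus\{v\}}d(x)\ge 2\Delta+14$, the latter contradicting Eq.~(\ref{eq16}) under ($A_8$). Because $c(vu_3)=5$ and $c(vu_4)=6$ behave differently, I would treat $i=4$ and $i=3$ separately but in parallel, pairing the color $6$ with $u_4$ and $5$ with $u_3$, exactly as the preceding propositions of this subcase do.

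First I would record the multiplicity gain. Since $T_{uv}\subseteq B_1\cup B_2\subseteq C(u_1)\cup C(u_2)$, every $j\in T_{uv}$ already occurs in $S_u$ from $u_1,u_2$; if in addition $T_1\cup T_2\subseteq C(u_i)$, then every color of $T_1\cup T_2$ picks up a second occurrence from $u_i$, while every color of $C_{12}$ lies in $C(u_1)\cap C(u_2)$ by definition. Hence mult$_{S_u}(j)\ge 2$ for all $j\in T_{uv}$, so by Eq.~(\ref{eq17}) the contribution of $T_{uv}$ to $\sum_i d(u_i)$ is at least $2t_{uv}=2(\Delta-2)$.

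Next I would accumulate the $W_i$-contributions. Lemma~\ref{12outT2ge2} gives $w_1,w_2\ge 4$, \ref{12out56u1u2} forces $5,6\in C(u_1)\cup C(u_2)$ together with the membership patterns for $W_1,W_2$, and \ref{12out127Su} supplies $1,2,7\in S_u$; these already push $\sum_i w_i$ well above its trivial value. The remaining gap to $2\Delta+14$ is closed by the directed-path propositions: with $T_1\cup T_2\subseteq C(u_i)$ the hypotheses of \ref{12out6notinu3}, \ref{12T4123}(3)/\ref{12T3124}(3), \ref{12686862}/\ref{12787872}, and Corollary~\ref{auv2}(5.1)--(5.2) are triggered, forcing extra copies of $\{3,4\}$ into $C(u_1)\cap C(u_2)$ and extra copies of $5,6,7$ into the neighborhoods; summing these with the baseline $2(\Delta-2)$ yields $\sum_i d(u_i)\ge 2\Delta+14$.

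The main obstacle will be the borderline subcases. The baseline $2t_{uv}$ only gives about $2\Delta+8$, so roughly six further units must be extracted from the path propositions, and in the handful of configurations where the count stalls at exactly $2\Delta+13$ (typically when $\Delta=7$ and all multiplicities are minimal) a degree-sum contradiction is unavailable; there I would instead exhibit a direct recoloring, swapping along a suitable $(1,\cdot)_{(u_1,\cdot)}$ or $(2,\cdot)_{(u_2,\cdot)}$ path to vacate a color for $uv$ via \ref{prop3001} and Lemma~\ref{lemma17}, precisely as in Lemmas~\ref{12T2neu4emptyset} and \ref{12T2inCu4T1T2inCu3}. Organizing this case split so that every borderline configuration is matched either to a contradiction or to an explicit extension is the crux of the argument.
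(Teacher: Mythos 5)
Your proposal takes essentially the same route as the paper: assume $T_1\cup T_2\subseteq C(u_i)$, drive the degree sum $\sum_{x\in N(u)\setminus \{v\}}d(x)$ past $2\Delta+13$ via Eq.~(\ref{eq17}) together with Lemmas~\ref{12outT2ge2}, \ref{12out56u1u2}, \ref{12out127Su} and the path propositions, and dispatch the borderline configurations by pinning down the $W_j$ exactly and either recoloring or reaching a structural contradiction. The one imprecision is in the multiplicity bookkeeping: mult$_{S_u}(j)\ge 2$ for $j\in T_{uv}$ already holds unconditionally by Lemma~\ref{auvge2}(2), and the genuine leverage of the containment hypothesis is that it yields $t_0\ge t_1+t_2$ when both $C(u_3)$ and $C(u_4)$ contain $T_1\cup T_2$, while otherwise (via Corollary~\ref{auv2d(u)5}(7.2)) it forces $C(u_4)=\{4,6,1/2/3\}\cup T_1\cup T_2$ with $t_1=t_2=2$ and $w_1=w_2=4$ --- which is exactly the case split the paper runs.
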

\begin{proof}
Assume that $T_1\cup T_2\subseteq C(u_4)$.
When $T_1\cup T_2\subseteq C(u_3)$,
$\sum_{x\in N(u)\setminus \{v\}}d(x)\ge |\{1, a_1\}| + |\{2, a_2\}| + |\{3, 5\}| + |\{4, 6\}| + |\{5, 6, 1, 2, 7\}| + 2t_{u v} + t_0\ge 2\Delta + 9 + t_1 + t_2\ge 2\Delta + 9 + 2 + 2 = 2\Delta + 13$.
It follows that mult$_{S_u\setminus C(u_3)}(5) = $ mult$_{S_u\setminus C(u_4)}(6) = 1$, $t_0 = t_1 + t_2 = 2 + 2 = 4$,
and from Corollary~\ref{auv2}(4) that $1/2/7/6\in C(u_3)$ and $1/2/7/5\in C(u_4)$.
Thus, $\sum_{x\in N(u)\setminus \{v\}}d(x)\ge w_1 + w_2 + w_3 + w_4 + 2t_{uv} + t_0\ge 2\times 4 + 2\times 3 + 2t_{uv} + 4 = 2\Delta + 14$.
In the other case, $(T_1\cup T_2)\setminus C(u_3)\ne \emptyset$.
It follows that $4\in C(u_1)\cup C(u_2)$,
and from  Corollary~\ref{auv2}(5.1) that mult$_{S_u}(3)\ge 2$.

One sees from Corollary~\ref{auv2}(7.2) that if $1/2/3\not\in C(u_4)$,
then $T_1\cup T_2\subseteq C(u_3)$.
Hence, $1/2/3\in C(u_4)$, $C(u_4) = \{4, 6, 1/2/3\}\cup T_1\cup T_2$, and $t_1 = t_2 = 2$.
It follows that $w_i = 4$, $d(u_i) = \Delta$, $i = 1, 2$.

{\bf Case 1.} $C(u_4) = \{4, 6, 3\}\cup T_1\cup T_2$.
It follows from \ref{125u4u1oru4u2}(5) that mult$_{S_u}(4)\ge 2$,
from Corollary~\ref{auv2}(4) that mult$_{S_u\setminus C(u_4)}(6)\ge 2$.
One sees that $\sum_{x\in N(u)\setminus \{v\}}d(x)\ge 6 + 2\times |\{3, 4, 6\}| + |\{5, 1, 2, 7\}| + 2t_{uv} + t_0\ge 2\Delta + 12 + t_0$.
It follows that $T_i\setminus C(u_3)\ne \emptyset$, $i = 1, 2$,
and from \ref{125u4u1oru4u2}(1--2) that $5\in C(u_1)\cap C(u_2)$.
Hence, assume w.l.o.g. $W_1 = \{1, 4, 5, 6\}$, $W_2 = [2, 5]$, $C(u_3) = [1, 3]\cup [5, 7]$,
and from Corollary~\ref{auv2}(5.1) that $H$ contains a $(1, 3)_{(u_3, u_4)}$-path.
Thus, by \ref{12outRineemptyset}(3), we are done.

{\bf Case 2.} $C(u_4) = \{4, 6, 1\}\cup T_1\cup T_2$.
Then $3\in C(u_1)\cap C(u_2)$.
When $H$ contains no $(3, j)_{(u_1, u_3)}$-path for some $j\in T_1$,
then $4\in C(u_1)$, and it follows from \ref{125u4u1oru4u2}(1) that $5\in C(u_1)$,
from \ref{126notinu4}(1) that $7\in C(u_1)$,
thus, $W_1\ge |\{1, 3, 4, 5, 7\}| = 5$.
In the other case, $H$ contains a $(3, T_1)_{(u_1, u_3)}$-path.

One sees from Corollary~\ref{auv2}(3--4) that  mult$_{S_u}(4)\ge 2$ or mult$_{S_u\setminus C(u_4)}(6)\ge 2$.

Then $\sum_{x\in N(u)\setminus \{v\}}d(x)\ge 6 + |\{3, 3, 4, 6, 5, 1, 2, 7, 4/6\}| + 2t_{uv} + t_0\ge 2\Delta + 11 + t_1 = 2\Delta + 11 + 2 = 2\Delta + 13$.
It follows that $T_2\cap C(u_3) = \emptyset$ and then $4\in C(u_2)$.
One sees from Corollary~\ref{auv2d(u)5}(7.2) that $2\in C(u_1)$,
and from \ref{12out5notinu1u2}(2.1) that $5\in C(u_2)$,
from \ref{126notinu4}(2) that $7\in C(u_1)\cup C(u_2)$.
Thus, $w_1 + w_2\ge |[2, 5]| + |[1, 3]| + |[6, 7]| = 9 > 8$, a contradiction.

Otherwise, $(T_1\cup T_2)\setminus C(u_4)\ne \emptyset$,
and likewise, $(T_1\cup T_2)\setminus C(u_3)\ne \emptyset$.
\end{proof}

Assuming w.l.o.g. $T_2\setminus C(u_4)\ne \emptyset$.
Then $3\in C(u_2)$, $H$ contains a $(3, T_2)_{(u_2, u_3)}$-path.
One sees from Corollary~\ref{auv2d(u)5}(5) that $T_2\subseteq C(u_3)$.
By Lemma~\ref{12outT1T2notinCu3Cu4}, $T_1\setminus C(u_3)\ne \emptyset$ and then $T_1\subseteq C(u_4)$.
Then $4\in C(u_1)$, $H$ contains a $(4, T_1)_{(u_1, u_4)}$-path.
It follows from Corollary~\ref{auv2}(5.2) and (3) that $3, 4\in B_1\cup B_2$ and mult$_{S_u}(i)\ge 2$, $i = 3, 4$.
One sees that $\sum_{x\in N(u)\setminus \{v\}}d(x)\ge |\{6\}| + 2\times |\{3, 4\}| + |\{1, 2, 7, 5, 6\}| + 2t_{uv} + t_0 = 2\Delta + 11 + t_0$.

When mult$_{S_u\setminus C(u_3)}(5)\ge 2$ and mult$_{S_u\setminus C(u_4)}(6)\ge 2$,
then $\sum_{x\in N(u)\setminus \{v\}}d(x)\ge +  2\Delta + 11 + t_0 + |\{5, 6\}| = 2\Delta + 13 + t_0$.
It follows that mult$_{S_u}(i) = 1$, $i = 1, 2$, mult$_{S_u}(j) = 2$, $j = 1, 2$.
If $1, 2\in C(u_3)\cup C(u_4)$,
one sees from  Corollary~\ref{auv2}(5.1) that $1\not\in C(u_3)$ or $H$ contains a $(1, 3)_{(u_3, u_4)}$-path,
and $2\not\in C(u_4)$ or $H$ contains a $(2, 4)_{(u_3, u_4)}$-path.
Thus, by \ref{12outRineemptyset}(3), we are done.
Otherwise, assume w.l.o.g. $1\in C(u_2)$.
Then $2\in C(u_4)$, $4\in C(u_3)$,
and $(u u_1, u u_2, u u_3, u u_4)\to (T_1, 4, 2, 1)$ reduces the proof to Case 2.3.
In the other case, assume w.l.o.g. mult$_{S_u\setminus C(u_3)}(5) = 1$.

If $5\in C(u_2)$, one sees from \ref{12out5notinu1u2}(1.1) that $H$ contains a $(2, 5)_{(u_1, u_2)}$-path,
then by \ref{125u4u1oru4u2}(1), we are done.
Otherwise, $5\in C(u_1)$,
and it follows from \ref{12out5notinu1u2}(2.1) that $1\in C(u_2)$, $H$ contains a $(1, 5)_{(u_1, u_2)}$-path.
If $H$ contains no $(5, j)_{(u_1, u_4)}$-path for some $j\in T_1\cap T_3$,
then $(u u_1, u u_4)\to (T_1\cap T_3, 5)$ reduces the proof to Case 2.4.1.
Otherwise, $H$ contains a $(5, T_1\cap T_3)_{(u_1, u_4)}$-path.
If $1\not\in C(u_4)$ and $H$ contains no $(1, 3)_{(u_3, u_4)}$-path,
then $(u u_1, u u_2, u u_4)\to (T_1\cap T_3, 5, 1)$ reduces the proof to Case 2.4.2.
Otherwise, $1\in C(u_4)$ or $H$ contains a $(1, 3)_{(u_3, u_4)}$-path.
Likewise, if mult$_{S_u\setminus C(u_4)}(6) = 1$, then $6\in C(u_2)$, $2\in C(u_1)$,
and $2\in C(u_3)$ or $H$ contains a $(2, 4)_{(u_3, u_4)}$-path.

Then $\sum_{x\in N(u)\setminus \{v\}}d(x)\ge +  2\Delta + 11 + t_0 + |\{1, 6/2\}| = 2\Delta + 13 + t_0$.
It follows that $t_0 = 0$, $(C_{12}\cup T_2)\cap C(u_4) = \emptyset$, $(C_{12}\cup T_1)\cap C(u_3) = \emptyset$,
and mult$_{S_u}(i) = 2$, $i = 3, 4$, and mult$_{S_u}(7) = 1$.

One sees that $5\not\in C(u_4)$ and $C_{12}\cap C(u_3) = \emptyset$.
Now we want to show that for some $j^*\in T_2\cup C_{12}\cup \{3\}$,
$H$ contains no $(i, j)_{(u_4, v)}$-path for each $i\in 1, 2, 7$.

Assume that $H$ contains a $(7, j)_{(u_4, v_7)}$-path for some $j\in T_2$.
If $H$ contains no $(4, 7)_{(u_1, u_4)}$-path,
then $(u u_1, u v)\overset{\divideontimes}\to (7, j)$.
Otherwise, $H$ contains a $(4, 7)_{(u_1, u_4)}$-path.
Thus, by \ref{126notinu3}(2), we are done.
Hence, $H$ contains no $(7, j)_{(u_4, v_7)}$-path for each $j\in T_2$.

Further, we assume that $H$ contains a $(2, T_2)_{(u_4, v_2)}$-path.
If $3\in C(v_2)$, one sees from \ref{12outRineemptyset}(1) that $C_{12}\setminus C(v_2) \ne \emptyset$,
and it follows from Lemma~\ref{auvge2}(1.2) that $H$ contains a $(7, C_{12}\cap R_2)_{(v_2, v_7)}$-path.
Then $j^*\in C_{12}\cap R_2$.
Otherwise, $3\not\in C(v_2)$.
Since $3\in B_1$, $3\in C(u_1)\setminus C(u_4)$.
One sees from \ref{123v13v2}(2) that $H$ contains a $(3, 7)_{(v_2, v_7)}$-path.
Then $j^* = 3$.
Hence, by Corollary~\ref{auv2}(4) or $vu_4\to 3$,
we have $6\in B_1\cup B_2$, and mult$_{S_u}(6) = 2$.
It follows that mult$_{S_u}(2) = 1$.

One sees from \ref{12out5notinu1u2}(2.2) that $2\in C(u_1)$ or $H$ contains a $(2, 3/4)_{(u_1, u)}$-path.
Together with $2\in C(u_3)$ or $H$ contains a $(2, 1/4)_{(u_3, u)}$-path, $2\in C(u_1)\cup C(u_3)$.
If $2\in C(u_2)$, one sees that $1\in C(u_3)$, $3\in C(u_4)$,
then $(u u_1, u u_2, u u_3, u u_4)\to (3, T_2, 2, 1)$ reduces the proof to Case 2.3.
Otherwise, $2\in C(u_3)$ and then $3\in C(u_1)$, $1\in C(u_4)$.
One sees from \ref{126notinu4}(1) that $7\in C(u_4)$,
and from \ref{126notinu3}(2) that $H$ contains a $(4, 7)_{(u_3, u_4)}$-path.
Then $(u u_1, u u_2)\to (7, 5)$ reduces the proof to Case (2.5.1.)
This finishes the inductive step for the Case (2.5.1.)

This finishes the inductive step for the case where $G$ contains the configuration ($A_6$)--($A_8$),
and thus completes the proof of Theorem~\ref{thm01}.

\section{Conclusions}
Proving the acyclic edge coloring conjecture (AECC) on general graphs has a long way to go,
but on planar graphs it is very close to reach with recent efforts.
In particular, the AECC has been confirmed true for planar graphs without $i$-cycles, for each $i \in \{3, 4, 5, 6\}$,
and for planar graphs that containing no intersecting $3$-cycles.
In this paper, we deal with the general planar graph.
Prior to our work, their acyclic chromatic index was shown to be at most $\Delta + 6$,
and in this work we made another step that $a'(G) \le \Delta + 5$.
The main technique is a discharging method to show that at least one of the several specific local structures must exist in the graph,
followed by an induction on the number of edges in the graph.

The discharging method is classic;
different from the previous work, we do not always validate $\omega'(x)\ge 0$ for every element in $x\in V(H)\cup F(H)$ independently,
but sometimes collectively for subsets of elements.
For example, in Discharging-Lemma~\ref{demma03} we validate $\sum_{x\in \{v\}\cup F_3(v)}\omega'(x)\le 0$,
and in Discharging-Lemma~\ref{demma04}, when $u$ is a $4$-vertex in $V(H)$ adjacent to a vertex $v$,
we validate $\sum_{x\in \{u, v\}\cup (F_3(u)\cup F_3(v))}\omega'(x)\le 0$ if $d_H(v) = 4$,
and $\sum_{x\in \{u\}\cup F_3(u)}\omega'(x)\le 0$ if $d_H(v)\ge 5$.

Some notations defined and some properties unearthed in the paper are of importance during the proof.
For example, the notations of $T_i$, $W_i$, $C_{ij}$, $S_u$, $T_0$,
and if $C(u)\cap C(v) = \{1, 2\}$, then $C(u_1) = W_1\cup T_2\cup C_{12}$, $C(u_2) = W_2\cup T_1\cup C_{12}$,
and the characteristic stated in Lemma~\ref{auvge2}, Corollary~\ref{auv2} and Corollary~\ref{auv2d(u)5}.

Surely, our detailed construction and proofs are built on top of several existing works,
and we believe our new design of the local structures and the associated discharging method can be helpful
for further lowering the upper bound of the acyclic chromatic number on planar graphs, or addressing the AECC eventually.
On the other hand, for general graphs, improvements from the approximation algorithm perspective,
that is, to reduce the coefficient for $\Delta$ in the upper bound (from the currently best $3.74$), are impactful pursuit.


\begin{thebibliography}{10}

\bibitem{AMR91}
N.~Alon, C.~Mcdiarmid, and B.~Reed.
\newblock Acyclic coloring of graphs.
\newblock {\em Random Structures \& Algorithms}, 2:277--288, 1991.

\bibitem{ABZ01}
N.~Alon, B.~Sudakov, and A.~Zaks.
\newblock Acyclic edge colorings of graphs.
\newblock {\em Journal of Graph Theory}, 37:157--167, 2001.

\bibitem{AMM12}
L.~D. Andersen, E.~M{\'a}{\v{c}}ajov{\'a}, and J.~Maz{\'a}k.
\newblock Optimal acyclic edge-coloring of cubic graphs.
\newblock {\em Journal of Graph Theory}, 71:353--364, 2012.

\bibitem{BC09}
M.~Basavaraju and L.~S. Chandran.
\newblock Acyclic edge coloring of graphs with maximum degree $4$.
\newblock {\em Journal of Graph Theory}, 61:192--209, 2009.

\bibitem{BLSNHT11}
M.~Basavaraju, L.~S. Chandran, N.~Cohen, F.~Havet, and T.~M{\"u}ller.
\newblock Acyclic edge-coloring of planar graphs.
\newblock {\em SIAM Journal on Discrete Mathematics}, 25:463--478, 2011.

\bibitem{EP12}
L.~Esperet and A.~Parreau.
\newblock Acyclic edge-coloring using entropy compression.
\newblock {\em European Journal of Combinatorics}, 34:1019--1027, 2013.

\bibitem{F78}
J.~Fiamcik.
\newblock The acyclic chromatic class of a graph.
\newblock {\em Math. Slovaca}, 28:139--145, 1978.

\bibitem{GKP17}
I.~Giotis, L.~Kirousis, K.~I. Psaromiligkos, and D.~M. Thilikos.
\newblock Acyclic edge coloring through the lov{\'a}sz local lemma.
\newblock {\em Theoretical Computer Science}, 665:40--50, 2017.

\bibitem{MR98}
M.~Molloy and B.~Reed.
\newblock Further algorithmic aspects of the local lemma.
\newblock 1998.

\bibitem{NPS12}
S.~Ndreca, A.~Procacci, and B.~Scoppola.
\newblock Improved bounds on coloring of graphs.
\newblock {\em European Journal of Combinatorics}, 33:592--609, 2012.

\bibitem{SCH20}
Q.~Shu, Y.~Chen, S.~Han, G.~Lin, E.~Miyano, and A.~Zhang.
\newblock Acyclic edge coloring conjecture is true on planar graphs without
  intersecting triangles.
\newblock In {\em Proceedings of the 16th International Conference on Theory
  and Applications of Models of Computation (TAMC 2020)}, LNCS 12337, pages
  426--438, 2020.

\bibitem{SCH21}
Q.~Shu, Y.~Chen, S.~Han, G.~Lin, E.~Miyano, and A.~Zhang.
\newblock Acyclic edge coloring conjecture is true on planar graphs without
  intersecting triangles.
\newblock {\em Theoretical Computer Science}, 882:77--108, 2021.

\bibitem{SWW12}
Q.~Shu, W.~Wang, and Y.~Wang.
\newblock Acyclic edge coloring of planar graphs without $5$-cycles.
\newblock {\em Discrete Applied Mathematics}, 160:1211--1223, 2012.

\bibitem{SW11}
Q.~Shu, W.~Wang, and Y.~Wang.
\newblock Acyclic chromatic indices of planar graphs with girth at least $4$.
\newblock {\em Journal of Graph Theory}, 73:386--399, 2013.

\bibitem{SWMW19}
Q.~Shu, Y.~Wang, Y.~Ma, and W.~Wang.
\newblock Acyclic edge coloring of $4$-regular graphs without 3-cycles.
\newblock {\em Bulletin of the Malaysian Mathematical Sciences Society},
  42:285--296, 2019.

\bibitem{Sku04}
S.~Skulrattankulchai.
\newblock Acyclic colorings of subcubic graphs.
\newblock {\em Information Processing Letters}, 92:161--167, 2004.

\bibitem{Vizing64}
V.~G. Vizing.
\newblock On an estimate of the chromatic class of a $p$-graph.
\newblock {\em Discret Analiz}, 3:25--30, 1964.

\bibitem{WZ16}
T.~Wang and Y.~Zhang.
\newblock Further result on acyclic chromatic index of planar graphs.
\newblock {\em Discrete Applied Mathematics}, 201:228--247, 2016.

\bibitem{WMSW19}
W.~Wang, Y.~Ma, Q.~Shu, and Y.~Wang.
\newblock Acyclic edge coloring of $4$-regular graphs (ii).
\newblock {\em Bulletin of the Malaysian Mathematical Sciences Society},
  42:2047--2054, 2019.

\bibitem{WSW4}
W.~Wang, Q.~Shu, and Y.~Wang.
\newblock Acyclic edge coloring of planar graphs without $4$-cycles.
\newblock {\em Journal of Combinatorial Optimization}, 25:562--586, 2013.

\bibitem{WSW127}
W.~Wang, Q.~Shu, and Y.~Wang.
\newblock A new upper bound on the acyclic chromatic indices of planar graphs.
\newblock {\em European Journal of Combinatorics}, 34:338--354, 2013.

\bibitem{WSWZ14}
Y.~Wang, Q.~Shu, J.-L. Wu, and W.~Zhang.
\newblock Acyclic edge coloring of planar graphs without a $3$-cycle adjacent
  to a $6$-cycle.
\newblock {\em Journal of Combinatorial Optimization}, 28:692--715, 2014.

\end{thebibliography}

\end{document}